\documentclass[11pt,a4paper]{book}
\usepackage[T1]{fontenc}
\usepackage{amsmath,amsfonts,amssymb,amsthm,cite}
\usepackage{fancyhdr}
\usepackage[all]{xy}
\usepackage{feynmf}
\addtolength{\headwidth}{\marginparwidth}
\usepackage[frenchb]{babel} 

\pagestyle{fancyplain} 
\rhead[\nouppercase\leftmark]{ \nouppercase \thepage}
\lhead[\nouppercase\thepage]{ \nouppercase \rightmark} 
\cfoot{}

\addto\captionsfrancais{}
\addto\captionsfrancais{}
\addto\captionsfrancais{}
\textheight=22.5cm
\textwidth=16.cm
\oddsidemargin=0cm
\evensidemargin=\oddsidemargin
\topmargin=1.3cm 
\topskip 1.0cm
\sloppy
\setlength{\headheight}{1cm}

\advance\topmargin by -\headheight
\advance\topmargin by -\headsep

\theoremstyle{plain}
\newtheorem{thm}{Theorem}[section]
\newtheorem{lem}[thm]{Lemma}
\newtheorem{assum}[thm]{Assumption}
\newtheorem{cly}[thm]{Corollary}

\newtheorem*{theofr}{Théorème}

\newtheorem{rem}[thm]{Remark}

\newtheorem{theorem}[thm]{Theorem}
\newtheorem{corollary}[thm]{Corollary}
\newtheorem{conjecture}[thm]{Conjecture}
\newtheorem{prop}[thm]{Proposition}
\newtheorem{remark}[thm]{Remark}
\newtheorem{lemma}[thm]{Lemma}

\theoremstyle{definition}
\newtheorem{defn}[thm]{Definition}
\newtheorem*{deffr}{Définition}
\newtheorem{definition}[thm]{Definition}

\DeclareMathOperator{\Op}{\mathfrak{Op}} 
\DeclareMathOperator{\cof}{cof}          
\DeclareMathOperator{\Exp}{Exp}          
\DeclareMathOperator{\Lin}{Lin}           
\DeclareMathOperator{\Tad}{Tad}                 
\DeclareMathOperator{\Dom}{Dom}          
\DeclareMathOperator{\Diff}{Diff}          
\DeclareMathOperator{\Id}{Id}              
\DeclareMathOperator{\Ima}{Im}           
\DeclareMathOperator{\Ker}{Ker}          
\DeclareMathOperator{\Res}{Res}          
\DeclareMathOperator{\Tr}{Tr}            
\DeclareMathOperator{\sgn}{sgn}       
\DeclareMathOperator{\argch}{argch}    
\DeclareMathOperator{\argsh}{argsh}    
\DeclareMathOperator{\sech}{sech}    

\newcommand{\E}{\mathcal{E}}						
\newcommand{\Th}{\Theta}
\renewcommand{\th}{\theta}
\newcommand{\A}{\mathcal{A}}             
\newcommand{\Abb}{\mathbb{A}}          
\renewcommand{\a}{\alpha}                
\renewcommand{\b}{\beta}                
\newcommand{\bfr}{\mathfrak{b}}				
\newcommand{\B}{\mathcal{B}}             
\newcommand{\C}{\mathbb{C}}              
\newcommand{\CC}{\mathcal{C}}              
\newcommand{\CR}{\mathcal{R}}           
\newcommand{\D}{\mathbb{D}}              
\newcommand{\Coo}{C^\infty}              
\newcommand{\del}{\partial}              
\newcommand{\DD}{\mathcal{D}}            
\newcommand{\eps}{\varepsilon}           
\newcommand{\F}{\mathcal{F}}						
\newcommand{\Ga}{\Gamma}                 
\newcommand{\ga}{\gamma}                 
\renewcommand{\H}{\mathcal{H}}           
\newcommand{\HH}{\mathbb{H}}          
\newcommand{\half}{{\mathchoice{\thalf}{\thalf}{\shalf}{\shalf}}}
\newcommand{\hideqed}{\renewcommand{\qed}{}} 
\newcommand{\K}{\mathcal{K}}             
\newcommand{\ka}{\kappa}								
\renewcommand{\L}{\mathcal{L}}           
\newcommand{\la}{\lambda}                
\newcommand{\N}{\mathbb{N}}              
\newcommand{\M}{\mathcal{M}}              
\newcommand{\ox}{\otimes}                
\newcommand{\om}{\omega}                
\newcommand{\pa}{\partial}
\newcommand{\Q}{\mathbb{Q}}              
\newcommand{\ol}{\overline}             
\renewcommand{\O}{\mathcal{O}}						
\newcommand{\wox}{\wh \otimes}					
\newcommand{\R}{\mathbb{R}}              
\newcommand{\RR}{\mathcal{R}}						
\newcommand{\rt}{\mathrm{t}}             
\newcommand{\rx}{\mathrm{x}}             
\newcommand{\ry}{\mathrm{y}}             
\newcommand{\rv}{\mathrm{v}}             
\newcommand{\ru}{\mathrm{u}}             


\newcommand{\set}[1]{\{\,#1\,\}}         
\newcommand{\shalf}{{\scriptstyle\frac{1}{2}}}  
\renewcommand{\S}{\mathcal{S}}          
\renewcommand{\SS}{\mathcal{S}}          
\newcommand{\T}{\mathbb{T}}              
\newcommand{\thalf}{\tfrac{1}{2}}        
\renewcommand{\Th}{\Theta}                
\renewcommand{\th}{\theta}              

\newcommand{\Ups}{\Upsilon}							

\newcommand{\vth}{\vartheta}            
\newcommand{\wh}{\widehat}              
\newcommand{\wt}{\widetilde}            
\newcommand{\Z}{\mathbb{Z}}              
\renewcommand{\.}{\cdot}                 

\def\<#1,#2>{\langle#1\,,\,#2\rangle}        
\newcommand{\norm}[1]{\left\lVert#1\right\rVert}   
\newcommand{\Uq}{{\mathcal U}_q(su(2))}  
\newcommand{\U}{\mathcal{U}}              
\newcommand{\SUq}{SU_q(2)}

\def\T{{\cal T}}

\newcommand{\be}{\begin{enumerate}}
\newcommand{\ee}{\end{enumerate}}

\newbox\ncintdbox \newbox\ncinttbox
\setbox0=\hbox{$-$} \setbox2=\hbox{$\displaystyle\int$}
\setbox\ncintdbox=\hbox{\rlap{\hbox
    to \wd2{\kern-.1em\box2\relax\hfil}}\box0\kern.1em}
\setbox0=\hbox{$\vcenter{\hrule width 4pt}$}
\setbox2=\hbox{$\textstyle\int$}
\setbox\ncinttbox=\hbox{\rlap{\hbox
    to \wd2{\kern-.14em\box2\relax\hfil}}\box0\kern.1em}
\newcommand{\ncint}{\mathop{\mathchoice{\copy\ncintdbox}
    {\copy\ncinttbox}{\copy\ncinttbox}
    {\copy\ncinttbox}}\nolimits}
\renewcommand{\up}{{\mathord{\uparrow}}}  
\newcommand{\dn}{{\mathord{\downarrow}}}
\newcommand{\ooh}{{\tfrac{3}{2}}}       
\newcommand{\oh}{{\tfrac{1}{2}}}          
\newcommand{\ssesq}{{\scriptstyle\frac{3}{2}}} 
\newcommand{\sesq}{{\mathchoice{\ooh}{\ooh}{\ssesq}{\ssesq}}} 
\newcommand{\kett}[1]{|#1\rangle\!\rangle}  
\newcommand{\piappr}{\underline{\pi}}         
\newcommand{\ul}[1]{\underline{#1}}             
\newcommand{\ket}[1]{|#1\rangle}                  
\newcommand{\sg}{\sigma}                              
\newcommand{\twobyone}[2]{\begin{pmatrix}#1\\#2\end{pmatrix}}

\begin{document}

\NoAutoSpaceBeforeFDP
\thispagestyle{empty}

\begin{center}
{\large \bf  Universit\'e  de Provence,  Aix-Marseille I}\\

\vspace{1.3cm}

{\Large \bf TH\`ESE}\\

\vspace{1.3cm}

Pr\'esent\'ee par\\

\bigskip

{\huge \bf  Cyril LEVY}\\

\vspace{1.5cm}

pour obtenir le grade de\\

\medskip

{\large \bf  Docteur de l'Universit\'e de Provence}\\

\bigskip

{\bf  Sp\'ecialit\'e}: Physique Th\'eorique, Physique Math\'ematique\\

\bigskip

{ \bf \'Ecole Doctorale}: Physique et Science de la Mati\`ere (ED 0352)\\

\vspace{1.2cm}

{\bf  \large Titre:}

\bigskip

{\huge \bf {Spectral Action }}

\bigskip

{\huge \bf {in Noncommutative Geometry }}

\bigskip

{\huge \bf { and Global Pseudodifferential Calculus}} 

\end{center}

\vspace{1cm}
\begin{center}
\bf{Soutenue publiquement le 12 Juin 2009 devant le jury composé de}
\end{center}
\medskip
\hspace{-1cm}
\begin{tabular}{ l   }
  \vspace{0.1cm}
  \large {\bf Moulay-Tahar Benameur}, Professeur à l'Université Paul Verlaine, Metz   \\
  \vspace{0.1cm}
  \large {\bf Ludwik D\c{a}browski}, Professeur à SISSA, Trieste (Rapporteur) \\
  \vspace{0.1cm}
  \large {\bf Bruno Iochum}, Professeur à l'Université de Provence, Marseille (Directeur de thèse) \\
  \vspace{0.1cm}
  \large {\bf Sylvie Paycha}, Professeur à l'Université Blaise Pascal, Clermont-Ferrand (Présidente, Rapporteur) \\
  \vspace{0.1cm}
  \large {\bf Michael Puschnigg}, Professeur à l'Université de la Méditerranée, Marseille   \\
  
  \large {\bf Thomas Sch\"ucker}, Professeur à l'Université de Provence, Marseille   
\end{tabular}

\vspace{1.2cm}

\begin{center}
{\large \bf Centre de Physique Th\'eorique\\
CNRS--UMR  6207}\\
\end{center}
\chapter*{Remerciements}

En premier lieu, je tiens à remercier Bruno Iochum, mon directeur de thèse, qui m'a conseillé, soutenu et beaucoup appris, le long de ces trois années de thèse. Ce fut un honneur et un plaisir de travailler sous sa direction. 
Je dois aussi beaucoup à Andrzej Sitarz et Driss Essouabri, avec qui j'ai eu le plaisir de collaborer au cours de cette thèse.
Tous trois ont su me faire confiance et je tiens à leur exprimer
ma plus sincère gratitude.

\medskip

Je voudrais aussi remercier les nombreuses personnes, du CPT et de l'IML, qui m'ont aidé à progresser
par des discussions stimulantes et intéressantes. Je remercie plus particulièrement Thomas Krajewski, Thomas Schücker, Michael Puschnigg, Christoph Stephan, Jan Jureit, Serge Lazzarini, Mathieu Beau, Baptiste Savoie, Jean-Philippe Michel, Pierre Duclos et Antony Wassermann.

\medskip

Un grand merci aussi aux chercheurs que j'ai pu rencontrer au cours de séminaires et conférences et qui ont été une source de connaissances, de conseils et de suggestions très utiles. Je suis notamment très reconnaissant envers Sylvie Paycha, Jean-Marie Lescure, Gerd Grubb, Ryszard Nest, Moulay-Tahar Benameur, Salah Mehdi, Giovanni Landi, Ludwik D\c{a}browski, Pierre Martinetti, Victor Gayral, Fabien Vignes-Tourneret, Fyodor Sukochev, Simon Brain, André Unterberger, et Rémi Léandre.

\medskip

J'ai pu bénéficier au CPT d'excellentes conditions de travail. En particulier, j'adresse
mes remerciements à l'ensemble du personnel administratif.

\medskip

\newpage
\tableofcontents
\newpage
{\renewcommand{\thechapter}{}\renewcommand{\chaptername}{}
\addtocounter{chapter}{-1}
\chapter{Introduction}\markboth{\sl INTRODUCTION}{\sl INTRODUCTION}}

This thesis is a presentation of the research I have done during the last three years in collaboration with 
Driss Essouabri, Bruno Iochum and Andrzej Sitarz.

\bigskip

Noncommutative geometry is a vast field of mathematics which aims at extending classical geometry in a noncommutative setting.  
Specifically, using functional analysis, operator algebras, spectral theory and
spin geometry, noncommutative geometry extends for instance
concepts of locally compact topological space, and spin Riemannian manifold. 

The interest for this mathematical field goes beyond the purely mathematical.
There are profound physical motivations behind the concepts of noncommutative geometry, which suggest that
it can be a used to describe the basic elements of physics 
such as spacetime and quantum fields. More specifically, noncommutative geometry 
appears, as a mathematical framework, well-suited to a geometric formulation of 
quantum concepts.

One may consider that noncommutative geometry was born with the following theorem of Gelfand--Naimark: any commutative $C^*$-algebra is isomorphic to the $C^*$-algebra of continuous functions on a compact space, namely the space of characters of the algebra. 
Since all the topological information of a space is contained in the set of continuous functions on the space, we can see that the notion of $C^*$-algebra is a generalization of the notion of compact topological space. 

From this fundamental result, it has been possible to go beyond purely topological concepts and construct a full noncommutative Riemannian differential geometry, with its own version of differential and integral calculus, fiber bundles, measures, spin manifolds, etc... This colossal work has been pioneered and essentially done by Alain Connes \cite{ConnesTorus,NCDG,Book,Cgeom,ConnesReality}. 

\medskip

Noncommutative geometry makes us change our point of view: it is not the set of points (the topological space) that is fundamental, but the set of functions (defined on the set of points). While equivalent in the commutative world, these two points of view are unequivalent in the noncommutative world.

Noncommutativity and this change of point of view are two crucial characteristics of the fundamental structure of quantum physics. Indeed, in quantum physics, observables do not always commute, contrarily to the case of classical physics. Moreover, $f(x)$, namely the evaluation of an observable $f$ at a point $x$, is not defined. However, the notion of observable still exists and $x(f)$ has a meaning, provided that $x$ is a state, a noncommutative equivalent of the notion of character of an algebra that can evaluate the observables $f$.

\medskip

This striking similarity between noncommutative geometry and quantum theory is very important and constitutes a source of the development of noncommutative geometry and its applications in physics. In fact, any $C^*$-algebra (commutative or not) is isomorphic to a closed subalgebra of the algebra of bounded operators on a Hilbert space. This shows that noncommutative geometry is well suited to the formalism of quantum physics. 

\medskip

Noncommutative geometry provides new mathematical notions that can be used to extend the fundamental aspect of noncommutativity of observables to spacetime itself. The two theories of fundamental interactions, namely quantum field theory (standard model) for the strong and electroweak interactions and general relativity for the gravitational interaction, are not based on the same mathematical formalism. The first one is based on a quantum point of view, while the latter one is classical. Moreover, spacetime itself is not considered the same way by these two theories. In quantum field theory, spacetime is fixed (Minkowskian flat spacetime) while in general relativity, spacetime is dynamical and gravitational effects are consequences of the curvature of the spacetime metric. These fundamental differences are not problematic if we study a physical phenomenon for which gravitational interaction can be neglected or inversely for which quantum effects can be neglected. Indeed, both theories are very powerful in their respective domain of application.   

However, for the study of phenomena that manifestly involve all known interactions (compact objects, black holes, big-bang), it is necessary to make these theories compatible, and to reunite them under a unique coherent mathematical formalism. The main idea that has been considered by theoretical physicists is to develop a generalization of quantum field theory in order to incorporate gravitation, or in other words, to quantize gravity. 

The pursuit of this goal has been realized through various approaches. One of these approaches is string theory, which is based on an increase in the number of dimensions, certain dimensions being compactified, and another approach is loop quantum gravity, which uses a spin foam structure for spacetime without using a background spacetime metric, contrarily to string theory. None of these theories have been confirmed by experiments yet, and theoretical predictions are difficult to obtain. 

\medskip 

The approach suggested by noncommutative geometry is based on a noncommutative generalization of the Lorentzian manifold which represents spacetime. By introducing noncommutativity at the spacetime level, this approach allows to apprehend the impossibility of spacetime continuity, as suggested by quantum mechanics, and the intrinsic limit associated to Planck length $l_p=\sqrt{\frac{{\cal
G}\hbar}{c^3}} \approx 10^{-33}$ cm. Actually, this approach successfully unified, at the classical level, the three interactions of the standard model with gravitation, and provided a geometric interpretation of the Higgs mechanism in particle physics. The fundamental object at the interface between noncommutative geometry and quantum field theory is the notion of spectral triple, which is a noncommutative generalization of the notion of spin Riemannian manifold, the starting point for the construction of physical theories.
By considering an ``almost commutative'' spectral triple, that is to say the product of a commutative spectral triple (a compact Riemannian spin manifold, describing the ``continuous'' spacetime), by a zero dimensional spectral triple (based on a matricial algebra), it is possible to simultaneously recover the standard model and general relativity. 
The fundamental ingredient in this unifying theory is the Chamseddine--Connes spectral action $S=\Tr \Phi(\DD/\Lambda)$, which is an action functional defined on the preceding almost commutative spectral triple and which is defined through the spectrum of the Dirac operator $\DD$ associated to this spectral triple, where $\Phi$ is a cut-off function and $\Lambda$ is a mass scale parameter. Specifically, the spectral action unifies the electroweak, strong and gravitational interactions \cite{Carminati,CC,CC1,CC2,CCM,Connesaction,CGravity,ConnesMarcolli,Schucker,Jureit1,Jureit2,Stephan}, and corresponds to the number of eigenvalues of the Dirac operator that are less or equal to a given mass scale $\Lambda$. By proceeding to a fluctuation of the metric, that is to say, a generalized gauge transformation associated to the unitary elements of the spectral triple algebra, it is possible, by expanding the spectral action in powers of $\Lambda$, to recover the standard model Lagrangian, coupled to the Einstein--Hilbert gravitational Lagrangian. 

\medskip

Another very important question associated to applications in physics of noncommutative geometry concepts concerns the renormalization in quantum field theory. Renormalization is a fundamental mathematical process that extracts physical information (finite numerical quantities) from mathematical expressions corresponding to infinite quantities. It has been shown \cite{Kreimer,ConnesKreimer} that the combinatorics of the perturbative renormalization in quantum field theory are related to the theory of Hopf algebras, which are the noncommutative Lie groups of symmetry in the language of noncommutative geometry. Moreover, Dixmier trace and Wodzicki residue, which are related to renormalization, have been extended by Connes \cite{Book,Cgeom} in the noncommutative setting of spectral triples. Thus, we expect that conceptual aspects of renormalization can be incorporated in a noncommutative formalism of quantum field theory.

\medskip

In this thesis, we discussed certain mathematical issues related to the computation of the spectral action on some fundamental noncommutative spectral triples, such as the noncommutative torus and the quantum 3-sphere $\SUq$. We also studied the question of existence of tadpoles (linear terms in the potential $A$ of the fluctuation of the metric in the spectral action) in the case of commutative Riemannian geometries, and the construction of a symbolic global pseudodifferential calculus allowing a generalization of the Weyl--Moyal product on a Schwartz space of rapidly decaying sections on a cotangent bundle of a manifold with linearization. 

\medskip

This text is divided into 5 chapters.  

The first and second chapters present the work done in collaboration with Driss Essouabri, Bruno Iochum and Andrzej Sitarz in the paper \emph{Spectral action on noncommutative torus} \cite{MCC}.

The first chapter is a review of definitions and properties about spectral triples, pseudodifferential calculus and spectral action. We establish in section 1.3 some results about residues of zeta functions $\zeta_{D_A}$ that will be used subsequently in chapter 2, 3. We also see some properties about linear terms in spectral actions (tadpoles) which will be used in chapter 4.

We compute, in the second chapter, the spectral action on noncommutative torus through heat kernel expansion and residues of Hurwitz--Epstein zeta functions.
The computation is based on expansions of terms
$\ncint \vert D_{A}\vert^{-k}$, $\zeta_{D_A}(0)$, as noncommutative integrals of certain
pseudodifferential operators. We applied these results to the $n$-noncommutative torus $\big(\Coo(\T^n_\Th),\H,\DD\big)$, which is a simple spectral triple.
The full spectral action obtained on the noncommutative torus is, in dimension 4,
$$
\SS(\DD_{A},\Phi,\Lambda)= 8\pi^2\,\Phi_{4} \, \Lambda^{4}
-\tfrac{4\pi^{2}}{3}\
\Phi(0) \,\tau(F_{\mu\nu}F^{\mu\nu})+  \mathcal{O}(\Lambda^{-2}),
$$
which shows that a new noncommutative Yang--Mills term $\tau(F_{\mu\nu}F^{\mu\nu})$, where $F_{\mu\nu}:=\delta_\mu(A_\nu)-\delta_\nu(A_\mu)-[A_\mu,A_\nu]$, replaces the standard one. 
For the computation, we had to investigate
holomorphic continuation of certain series of zeta functions. In particular, we saw that
a Diophantine constraint on the $\Th$ matrix of deformation naturally appears as crucial in the computation of
the spectral action, when the reality operator $J$ is taken into account. In other words, a Diophantine condition on $\Th$ appears when the perturbation $D\to D+A +\eps J A J^{-1}$ is considered, but is not needed for a perturbation of the type $D\to D+A$, where $A$ is a selfadjoint one form.

\medskip

The third chapter presents the work done in collaboration with Bruno Iochum and Andrzej Sitarz in the paper \emph{Spectral action on $\SUq$} \cite{MC}.
The spectral triple based on the $SU_q(2)$ quantum group can be seen as $q$-deformed noncommutative 3-sphere \cite{DLSSV}. 
We compute the full spectral action on $SU_q(2)$ by taking into account the real structure $J$.
The dimension spectrum of $SU_q(2)$ being a finite set, there is only a finite number
of terms in the spectral action expansion. Moreover, it appears that tadpoles exist on $SU_q(2)$. We show that the action depends on
$q$ and the limit $q\to 1$ does not exist automatically. When 
it exists, such limit does not lead to the associated action on the
commutative sphere $\mathbb{S}^3$. On $SU_q(2)$, we show that the sign $F$ of the Dirac operator
has very special properties: first, it commutes modulo $OP^{-\infty}$ 
with the algebra, and second, it can be seen as a selfadjoint
one-form, giving terms which are independent of $q$ in the spectral action.
In order to obtain the spectral action, we used the previous pseudodifferential calculus defined in chapter 1 and a Poincaré--Birkhoff--Witt decomposition of 1-forms.
It appears that a very special multiplicative behavior of noncommutative integrals with the real structure $J$ holds
on $SU_q(2)$: indeed, we show in section 2.4.6 that 
$$
\ncint AJBJ^{-1} |\DD|^{-3}=
\tfrac 12  \ncint  A  \vert \DD \vert ^{-3} \overline{ \ncint B
\vert \DD \vert ^{-3}}
$$
where $A$ and $B$ are $\delta$-1-forms (linear combinations of terms of the form $a[|\DD|,b]$ with $a,b\in \A$).
These results prove that the spectral action on $SU_q(2)$ is completely determined by the terms
$$
\ncint A^{q} |\DD|^{-p}, \quad 1\leq q\leq p\leq 3\,,
$$
where $A$ is a $\delta$-1-form.
We then tackle the precise computation of these noncommutative integrals. In order to realize these computations, we establish a differential calculus up to some ideal $\CR$ in pseudodifferential operators.

\medskip 
The fourth chapter presents the work done in collaboration with Bruno Iochum in the paper \emph{Tadpoles and commutative spectral triples} \cite{Tadpole}. In quantum field theory, a tadpole is a one-loop Feynman diagram with one external line, giving a contribution to the vacuum expectation value of the field. In the setting of noncommutative geometry, these diagrams are represented by linear terms in the Chamseddine--Connes spectral action.
We show in this chapter that there are no tadpoles of any order in the spectral action of compact spin manifolds without boundary, and no tadpoles of order less that 5 on manifolds with boundary with chiral boundary condition. 
Using pseudodifferential techniques and Wodzicki residue, we track zero terms in spectral actions of compact spin manifolds. We expect that most of these results can be extended to manifolds with boundary under very general boundary conditions.

\medskip
The fifth chapter presents a work \cite{PDOML} about global pseudodifferential calculus on manifold with linearization. In \cite{Gayral2}, Gayral et al. have established a remarkable link between deformation quantization and Connes' noncommutative geometry. It has been proven that Moyal planes are (noncompact) spectral triples. The Moyal product is a star-product defined on the 
Schwartz space $\SS(\R^{2n})$ and yields a Fr\'{e}chet pre-$C^*$-algebra structure on that space. The noncompact spectral triple described in \cite{Gayral2} was built on this algebra. 

We propose in this chapter the construction of a global pseudodifferential calculus that allows to extend the construction of the Moyal product to more general spaces, the main goal being the construction of new noncommutative spectral triples based on deformation quantization star-products. Specifically, the Moyal product that we obtain is defined on the Schwartz space of rapidly decaying functions on the cotangent bundle of a manifold. It corresponds to the transfer in the symbol space of the kernel convolution product through a quantization isomorphism.
We consider the case of manifolds $M$ with linearization in the sense of Bokobza-Haggiag \cite{Bokobza}, such that the associated (abstract) exponential map provides global diffeomorphisms of $M$ with $\R^n$ at any point. Cartan--Hadamard manifolds are special cases of such manifolds. The abstract exponential map encodes a notion of infinity on the manifold that allows, modulo some hypothesis of $S_\sigma$-bounded geometry, to define the Schwartz space of rapidly decaying functions, globally defined Fourier transformation and classes of symbols with uniform and decaying control over the $x$ variable. Given a linearization on the manifold with some properties of control at infinity, we construct symbol maps and $\la$-quantization, explicit Moyal star-product on the cotangent bundle, and classes of pseudodifferential operators. We show that these classes are stable under composition, and that the $\la$-quantization map gives an algebra isomorphism (which depends on the linearization) between symbols and pseudodifferential operators. We study $L^2$-continuity and give some examples. We show in particular that the hyperbolic 2-space $\HH$ has a $S_1$-bounded geometry, allowing the construction of a global symbol calculus of pseudodifferential operators, and an intrinsic Moyal product on $\S(\HH)$. 

\medskip
A summary in French of this thesis is given in appendix.

\chapter{Spectral action on spectral triples}

\section{Introduction}

The spectral action introduced by
Chamseddine--Connes \cite{CC} plays an important role in
noncommutative geometry. More
precisely, given a spectral triple $(\A,\H,\DD)$ where $\A$ is an
algebra acting on the Hilbert space $\H$ and $\DD$ is a Dirac-like
operator (see \cite{Book,Polaris}), they proposed a physical action
depending only on the spectrum of the covariant Dirac operator
\begin{equation}
\label{covDirac} \DD_{A}:=\DD + A + \epsilon \,JAJ^{-1}
\end{equation}
where $A$ is a one-form represented on $\H$,
so has the decomposition
\begin{equation}
\label{oneform}
A=\sum_{i}a_{i}[\DD,\,b_{i}],
\end{equation}
with $a_{i}$, $b_{i}\in \A$, $J$ is a real
structure on the triple corresponding to charge conjugation and
$\epsilon \in \set{1,-1}$ depending on the dimension of this triple
and comes from the commutation relation
\begin{equation}
    \label{Jcom}
J\DD=\epsilon \, \DD J.
\end{equation}

In this chapter, we revisit the notions of pseudodifferential
operators on spectral triples, zeta functions, noncommutative integral, dimension
spectrum and spectral action. The reality operator $J$ is
incorporated and we pay a particular attention to kernels of
operators which can play a role in the constant (scale invariant) term of
the spectral action.

\section{Noncommutative integration on a simple spectral triple}

\subsection{Pseudodifferential operators on spectral triples}

Noncommutative geometry, with its notion of spectral triple, provides a minimal data which allows to start doing quantum field theory. 

\begin{defn} A triplet $(A,\H,\DD)$ is called a spectral triple if $\A$ is a unital $*$-algebra faithfully represented as bounded operators on the Hilbert space $\H$, and $\DD$ is a selfadjoint operator with compact resolvent such that all commutators $[\DD,a]$ are bounded for $a\in \A$. 

A spectral triple is finitely summable, with dimension $n$, if the resolvent set of $\DD$ has characteristic values $\la_j = \O(j^{-1/n})$ when $j\to \infty$.

A spectral triple is said regular if $\A$ and $[\DD,\A]$ are in $\cap_{k\in \N}\Dom \delta^k$ where $\delta(T):=[|\DD|,T]$.
\end{defn}
If $n$ is even, we shall suppose that there is a $\Z/2$-grading $\chi$ (the chirality operator) which commutes with any element of $\A$ and anticommutes with $\DD$. 
In this formalism, $\DD$ plays the role of
the Dirac operator in classical Riemannian spin geometry. In other words, $\DD^{-1}$
will represent the (Euclidean) fermion propagator. In order to have a
charge conjugation in our theory, one adds a real structure,
which brings an antiunitary operator $J$ which commutes or anticommutes with $\DD$.
In this setting, the gauge bosons will
be seen as inner fluctuation of the Dirac operator $\DD\to \DD_A:=
\DD+A+\eps JAJ^{-1}$, where $A$ is a selfadjoint
1-form, which is an operator of the form $\sum a_i\, [\DD,b_i]$, where $a_i$ and $b_i$ are in $\A$.

The notion of real structure on a finite summable spectral triple is related to real $K$-homology, and is defined by:

\begin{defn} A real structure $J$ on a finitely summable spectral triple $(\A,\H,\DD)$ of dimension $n$ is an antilinear isometry $J$ such that $JD=\eps DJ$, $J^2=\eps'$, $J\chi =\eps'' \chi J$ (even case), where $\eps,\eps',\eps''\in \set{-1,1}$ and satisfy the following table
 \begin{center}
   \begin{tabular}{|c|cccccccc|}
   \hline
   $n$ mod 8&0&1&2&3&4&5&6&7\\
   \hline
   $J^2 = \pm 1$&+&+&\-$-$&$-$&$-$&$-$&+&+\\
   $JD= \pm DJ$&+&$-$&+&+&+&$-$&+&+\\
   $J\chi = \pm \chi J$&+& &$-$& &+& &$-$&\\
   \hline
   \end{tabular}
 \end{center}  
Moreover, any element of $J\A J^{-1}$ commutes with any element of $\A\cup [\DD,\A]$.

A spectral triple $(\A,\H,\DD)$ endowed with a real structure $J$ is called a real spectral triple.
\end{defn}

Remark that the notion of spectral triple can actually be extended to the framework of von Neumann algebras \cite{Benameur}.

For a given 1-form $A$ on a spectral triple $(\A,\H,\DD)$, we will have to compare here the kernels of $\DD$ and $\DD_{A}$. Note first that they are finite dimensional:

\begin{lemma}
\label{compres} Let $(\A,\H,\DD)$ be a spectral triple with a reality
operator $J$ and chirality $\chi$. If $A \in \Omega^1_{\DD}$ is a
one-form, the fluctuated Dirac operator $$\DD_A:= \DD+A+ \epsilon
JAJ^{-1}$$
is an operator
with compact resolvent, and in particular its kernel $\Ker \DD_A$ is
a finite dimensional space. This space is invariant by $J$ and $\chi$.
\end{lemma}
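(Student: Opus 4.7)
The plan is to establish the three assertions in turn: boundedness of the perturbation $B := A + \eps JAJ^{-1}$, preservation of compact resolvent (hence finite-dimensionality of $\Ker \DD_{A}$), and invariance of $\Ker \DD_{A}$ under $J$ and, in the even case, under $\chi$.

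First, I would observe that the one-form $A = \sum_{i} a_{i}[\DD, b_{i}]$ is a finite sum of products of bounded operators: the $a_{i}\in\A$ are bounded because $\A$ is faithfully represented on $\H$, and the commutators $[\DD, b_{i}]$ are bounded by the very definition of a spectral triple. Hence $A$ is bounded, and since $J$ is an antilinear isometry, conjugation by $J$ preserves operator norms, so $JAJ^{-1}$ is bounded with $\|JAJ^{-1}\| = \|A\|$. Thus $B$ is bounded and $\DD_{A} = \DD + B$ is a well-defined closed operator with $\Dom \DD_{A} = \Dom \DD$.

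Next, for the compact resolvent assertion, the key observation is that the graph norms associated with $\DD$ and $\DD_{A}$ are equivalent on $\Dom \DD$ thanks to the boundedness of $B$. Since $\DD$ has compact resolvent, the inclusion $(\Dom \DD, \|\cdot\|_{\DD}) \hookrightarrow \H$ is compact, hence $(\Dom \DD_{A}, \|\cdot\|_{\DD_{A}}) \hookrightarrow \H$ is compact as well, which is equivalent to $\DD_{A}$ having compact resolvent. Alternatively one can exhibit a $\la$ in the common resolvent set (which exists as soon as $|\la|$ is large enough compared with $\|B\|$) and use the identity $(\DD_{A} - \la)^{-1} = (\DD - \la)^{-1}\bigl(1 + B(\DD - \la)^{-1}\bigr)^{-1}$. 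Compact resolvent immediately yields discrete spectrum with finite-dimensional eigenspaces, so in particular $\Ker \DD_{A}$ is finite-dimensional.

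Finally, for the invariance, I would compute $J\DD_{A}J^{-1}$ using $J\DD J^{-1} = \eps \DD$ and $J^{2} = \eps'\in\set{\pm 1}$ (so $J^{-2} = \eps'$ and $(\eps')^{2}=1$):
\[
J\DD_{A}J^{-1} = \eps \DD + JAJ^{-1} + \eps\, J^{2} A J^{-2} = \eps \DD + JAJ^{-1} + \eps A = \eps\, \DD_{A}.
\]
Hence $J$ sends $\Ker \DD_{A}$ into itself. In the even case, the same strategy applies to $\chi$: it commutes with $\A$ and anticommutes with $\DD$, hence anticommutes with every $[\DD, b_{i}]$ and therefore with $A$; combined with the relation $\chi J = \eps'' J \chi$ (and $(\eps'')^{2}=1$) one checks that $\chi$ anticommutes with $JAJ^{-1}$ as well, so $\chi \DD_{A} = -\DD_{A}\chi$ and $\chi$ preserves $\Ker \DD_{A}$.

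The main obstacle is purely bookkeeping, namely tracking the signs $\eps, \eps', \eps''$ in the last step so that the $J$-conjugation collapses back to $\eps\DD_{A}$; the first two steps are standard applications of the compact resolvent calculus. I would also remark that the lemma does not suppose $A$ self-adjoint, so one cannot automatically conclude that $\DD_{A}$ is self-adjoint, but the statements to be proved only require $\DD_{A}$ to be closed and to possess one compact resolvent, both of which the argument above supplies.
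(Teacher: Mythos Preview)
Your proof is correct and follows essentially the same approach as the paper: bounded perturbation of a compact-resolvent operator via a resolvent identity, followed by the sign computations $J\DD_A J^{-1}=\eps\DD_A$ and $\chi\DD_A=-\DD_A\chi$. The paper uses the slightly different identity $(\DD+T-z)^{-1}=(\DD-z')^{-1}[1-(T+z'-z)(\DD+T-z)^{-1}]$, which avoids having to first locate a common resolvent point, but your Neumann-series variant (or the graph-norm argument) works equally well, and your explicit sign-tracking for the $J$ and $\chi$ invariance is more detailed than the paper's terse statement.
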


\begin{proof}
Let $T$ be a bounded operator and let $z$ be in the resolvent set of
$\DD+T$ and $z'$ be in the resolvent set of $\DD$. Then
$$
(\DD+T-z)^{-1}=(\DD-z')^{-1} \, [1-(T+z'-z)(\DD+T-z)^{-1}].
$$
Since $(\DD-z')^{-1}$ is compact by hypothesis and since the term in
bracket is
bounded, $\DD+T$ has compact resolvent. Applying this to
$T=A+\epsilon JAJ^{-1}$,
$\DD_A$ has a finite dimensional kernel (see for instance
\cite[Theorem 6.29]{Kato}).

Since according to the dimension, $J^2=\pm 1$, $J$ commutes or
anticommutes with
$\chi$,  $\chi$ commutes with the elements in the algebra $\A$ and
$\DD \chi=-\chi
\DD$ (see \cite{ConnesReality} or \cite[p. 405]{Polaris}), we get
$\DD_A \chi=-\chi
\DD_A$ and $\DD_A J=\pm J \DD_A$ which gives the result.
\end{proof}

We now set $(\A,\DD, \H,J)$ a given real regular spectral
triple of dimension $n$ and $A$ a selfadjoint one-form.
We denote
\begin{align*}
&P_0 \text{ the projection on } \Ker \DD\, ,
P_A \text{  the projection on } \Ker \DD_A\, ,\\
&D:= \DD + P_0\, , D_A:= \DD_A + P_A\, .
\end{align*}
$P_0$ and $P_A$ are thus finite-rank selfadjoint bounded operators.
Remark that
$D$ and $D_A$ are selfadjoint invertible operators with compact
inverses.

\begin{remark}
Since we only need to compute the residues and the value at 0 of the
$\zeta_{D}$,
$\zeta_{D_A}$ functions, it is not necessary to define the operators
$\DD^{-1}$ or
$\DD_A^{-1}$ and the associated zeta functions. However, we can
remark that all the
work presented here could be done using the process of Higson in
\cite{Higson}
which proves that we can add any smoothing operator to $\DD$ or
$\DD_A$ such that
the result is invertible without changing anything to the
computation of residues (see also \cite{ConnesMarcolli}, where this question is considered).
\end{remark}

Define for any $\a\in \R$
\begin{align*}
OP^0&:=\{T \, : \, t\mapsto F_t(T) \in
C^\infty\big(\R,\B(\H)\big)\},\\
OP^\a&:= \set{T \,: \, T |D|^{-\a} \in  OP^0}.
\end{align*}
where $F_t(T):=e^{it|D|}\,T\,e^{-it|D|} =
e^{it|\DD|}\,T\,e^{-it|\DD|}$ since $\vert D\vert=\vert \DD \vert
+P_0$.  Define
\begin{align*}
\delta(T)&:=[|D|,T],\\
\nabla(T)&:=[\DD^2,T],\\
\sigma_s(T)&:=|D|^{s} T |D|^{-s}, \, s\in \C.
\end{align*}
It has been shown in \cite{CM} that
$OP^0=\bigcap_{p\geq 0}\Dom(\delta^p)$.
In particular, $OP^0$ is a subalgebra of
$\B(\H)$ (while elements of $OP^{\a}$
are not necessarily bounded for $\a>0$) and
$\A\subseteq OP^0$, $J\A J^{-1}\subseteq
OP^0$, $[\DD,\A] \subseteq OP^0$. Note that $P_0 \in OP^{-\infty}$
and $\delta(OP^0)\subseteq OP^0$.

For any $t>0$, $\DD^{t}$ and $|\DD|^t$ are in $OP^t$ and
for any $\a\in \R, D^\a$ and $|D|^{\a}$ are in $OP^\a$.
By hypothesis, $|D|^{-n} \in \L^{(1,\infty)}(\H)$ so
for any $\a>n$, $OP^{-\a}\subseteq \L^1(\H)$.

\begin{lemma} \cite{CM}
    \label{propOP}

(i) For any $T\in OP^0$ and $s\in \C$, $\sigma_s(T)\in OP^0$.

(ii) For any $\a, \beta \in \R$, $OP^{\a}OP^{\beta}\subseteq
OP^{\a+\beta}$.

(iii) If $\a\leq\beta$, $OP^{\a}\subseteq OP^{\beta}$.

(iv) For any $\a$, $\delta(OP^{\a})\subseteq OP^{\a}$.

(v) For any $\a$ and $T\in OP^{\a}$, $\nabla(T) \in OP^{\a+1}$.
\end{lemma}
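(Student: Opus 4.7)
My plan is to reduce everything to the characterization $OP^{0}=\bigcap_{p\geq 0}\Dom(\delta^{p})$ stated just above the lemma, using two elementary facts repeatedly: $|D|^{s}$ is bounded for $s\leq 0$ (since $D$ is invertible), and $|D|^{s}$ commutes with $|D|$, hence with $\delta$, for every $s\in\C$. The second fact gives the intertwining $\delta\circ\sigma_{s}=\sigma_{s}\circ\delta$, which is my main technical lever.

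Items (iii) and (iv) are essentially immediate. For (iii), if $T\in OP^{\a}$ with $\a\leq\b$, I write $T|D|^{-\b}=(T|D|^{-\a})\,|D|^{\a-\b}$: the right factor is bounded and commutes with $\delta$, so $\delta^{p}(T|D|^{-\b})=\delta^{p}(T|D|^{-\a})\,|D|^{\a-\b}$ is bounded for every $p$. For (iv), $\delta(T)\,|D|^{-\a}=\delta(T|D|^{-\a})\in\bigcap_{p}\Dom(\delta^{p})=OP^{0}$ by the same commutation.

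The substantive point is (i). For $s=it$ with $t\in\R$, $|D|^{it}$ is unitary and commutes with $|D|$, so $\sigma_{it}(T)$ is bounded and $\delta^{p}(\sigma_{it}(T))=\sigma_{it}(\delta^{p}(T))$ is bounded, giving $\sigma_{it}(T)\in OP^{0}$ at once. For general $s\in\C$ I split $\sigma_{s}=\sigma_{\Re s}\circ\sigma_{i\Im s}$, so it remains to handle real $s$. I would start from the one-step recursion
$$
\sigma_{2}(T)=T+\nabla(T)\,|D|^{-2},
$$
coming from $|D|^{2}T=T|D|^{2}+\nabla(T)$, which settles even positive integers, and interpolate between integer values through the Connes--Moscovici asymptotic expansion
$$
\sigma_{s}(T)\,\sim\,\sum_{k=0}^{N}\binom{s/2}{k}\,\nabla^{k}(T)\,|D|^{-2k}+R_{N}(s,T),
$$
obtained from a Cauchy-type integral representation of $z^{s/2}$ around the spectrum of $|D|^{2}$. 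Each displayed term lies in $OP^{0}$ by a bootstrap with (v), and the remainder $R_{N}(s,T)$ can be arranged to lie in $OP^{-N}$ for arbitrarily large $N$, forcing $\sigma_{s}(T)\in OP^{0}$.

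Granted (i), the rest is short. For (ii), I factor
$$
ST\,|D|^{-\a-\b}=(S|D|^{-\a})\,\sigma_{\a}(T|D|^{-\b}),
$$
and both factors now lie in the algebra $OP^{0}$. For (v), the identity $\nabla(T)=|D|\,\delta(T)+\delta(T)\,|D|$ combined with (iv), (ii), and $|D|\in OP^{1}$ places both summands in $OP^{\a+1}$. The real obstacle is (i) for non-integer real $s$: the remainder estimate and the bootstrap through (v) must be threaded so as not to be circular, which is exactly the content of the pseudodifferential formalism developed in \cite{CM} and which we are essentially importing as a black box here.
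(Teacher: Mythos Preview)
Your treatment of (ii)--(iv) matches the paper's, and your factorization $ST|D|^{-\a-\b}=(S|D|^{-\a})\,\sigma_{\a}(T|D|^{-\b})$ for (ii) is exactly what the paper does. The difference is entirely in (i), and there your route is more complicated than necessary and, as you yourself suspect, circular as written.

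The paper avoids the asymptotic expansion altogether. For integer $s$ it uses the one-step identity $|D|T|D|^{-1}=T+\delta(T)|D|^{-1}$ (and its inverse $|D|^{-1}T|D|=T-|D|^{-1}\delta(T)$) to write $\sigma_{k}(T)$, for each $k\in\Z$, as a \emph{finite} sum $\sum_{q}\binom{k}{q}\delta^{q}(T)|D|^{-q}$ (or the analogous left form). Each summand is manifestly in $OP^{0}$ directly from $OP^{0}=\bigcap_{p}\Dom\delta^{p}$, with no appeal to (ii) or (v). Then, using your own observation that $\delta$ and $\sigma_{s}$ commute, one fixes $p$ and applies complex interpolation to $F_{p}(s):=\delta^{p}(\sigma_{s}(T))$: it is bounded at every integer, hence bounded for all $s\in\C$. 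That is the whole proof of (i).

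By contrast, your argument for real $s$ invokes the expansion $\sigma_{s}(T)\sim\sum_{k}\binom{s/2}{k}\nabla^{k}(T)|D|^{-2k}$ and then needs (v) (hence (ii), hence (i)) to place each term in $OP^{0}$; the remainder control you import from \cite{CM} already presupposes the lemma. The $\delta$-based expansion plus interpolation avoids this loop entirely and is what \cite{CM} actually does at this stage.

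One small correction in (v): since $|D|^{2}=\DD^{2}+P_{0}$, one has $\nabla(T)=\delta(T)|D|+|D|\delta(T)-[P_{0},T]$, not just the first two terms. The missing piece is harmless because $P_{0}\in OP^{-\infty}$, but it should be recorded.
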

\begin{proof} 
\noindent $(i)$ We have $|D|T|D|^{-1} = T + \delta(T) |D|^{-1}$
and $|D|^{-1}T|D| =  T  - |D|^{-1}\delta(T)$.
A recurrence proves that for any $k\in \N$,
$|D|^{k}T|D|^{-k}= \sum_{q=0}^k \genfrac{(}{)}{0pt}{1}{q}{k} \,
\delta^q(T)
|D|^{-q}$ and we get $|D|^{-k}T|D|^{k}= \sum_{q=0}^k
(-1)^q\genfrac{(}{)}{0pt}{1}{q}{k} \, |D|^{-q} \delta^q(T)$.

As a consequence, since $T$, $|D|^{-q}$
and $\delta^q(T)$ are in $OP^0$
for any $q\in \N$, for any $k\in \Z$,
$|D|^{k}T|D|^{-k}\in OP^0$. Let us fix $p\in
\N_0$ and define $F_p(s):=\delta^p(|D|^{s}T|D|^{-s})$
for $s\in \C$. Since for $k\in \Z$, $F_p(k)$ is bounded,
a complex interpolation proves that $F_p(s)$ is bounded, which
gives $|D|^{s} T |D|^{-s} \in OP^0$.

\noindent $(ii)$ Let $T\in OP^\a$ and $T'\in OP^\beta$. Thus,
$T|D|^{-\a}$, $T'|D|^{-\beta}$ are in $OP^0$. By $(i)$ we get
$|D|^{\beta}T|D|^{-\a}|D|^{-\beta} \in OP^0$, so
$T'|D|^{-\beta}|D|^{\beta}T|D|^{-\beta-\a}\in OP^0$. Thus,
$T'T|D|^{-(\a+\beta)}\in OP^{0}$.

\noindent $(iii)$ For $T\in OP^\a$, $|D|^{\a-\beta}$ and
$T|D|^{-\a}$ are in $OP^0$, thus
$T|D|^{-\beta}=T|D|^{-\a}|D|^{\a-\beta} \in OP^0$.

\noindent $(iv)$ follows from $\delta(OP^0)\subseteq OP^0$.

\noindent $(v)$ Since $ \nabla(T)=\delta(T)|D| + |D|\delta(T)-[P_0,T]$,
the result follows from $(ii)$, $(iv)$ and the fact that $P_0$ is in
$OP^{-\infty}$.
\end{proof}

\begin{remark}
\label{oprem}
Any operator in $OP^\a$, where $\a\in \R$,
extends as a continuous linear operator from $\Dom |D|^{\a+1}$ to
$\Dom |D|$ where the $\Dom |D|^\a$ spaces have their natural norms
(see \cite{CM,Higson}).
\end{remark}

We now introduce a definition of pseudodifferential operators
in a slightly different way
than in \cite{CM,Cgeom,Higson} which in particular pays attention
to the reality operator $J$ and the kernel of $\DD$
and allows $\DD$ and $|D|^{-1}$
to be pseudodifferential operators. It is more in the spirit of
\cite{CC1}.

\medskip

\begin{definition}
\label{defpseudo} Let us define $\DD(\A)$ as the polynomial algebra
generated
by $\A$, $J\A J^{-1}$, $\DD$ and $|\DD|$.

A pseudodifferential operator is an operator $T$ such that there
exists $d\in \Z$ such that
for any $N\in \N$, there exist $p\in \N_0$, $P\in \DD(\A)$ and $R\in
OP^{-N}$ ($p$, $P$ and
$R$ may depend on $N$) such that $P\,D^{-2p}\in OP^d$ and
$$
T=P\,D^{-2p}+R\, .
$$
Define $\Psi(\A)$ as the set of pseudodifferential operators and
$\Psi(\A)^k:=\Psi(\A)\cap OP^k$.
\end{definition}
Note that if $A$ is a 1-form, $A$ and $JAJ^{-1}$ are in $\DD(\A)$ and
moreover
$\DD(\A)\subseteq \cup_{p \in \N_0} OP^p$. Since $|\DD|\in \DD(\A)$
by construction and
$P_0$ is a pseudodifferential operator, for any
$p\in \Z$, $|D|^{p}$ is a pseudodifferential operator (in $OP^{p}$).
Let us
remark also that $\DD(\A)\subseteq\Psi(\A) \subseteq \cup_{k\in \Z}
OP^{k}$.

\begin{lemma} \cite{CM,Cgeom}
\label{pdoalg}
The set of all pseudodifferential operators $\Psi(\A)$ is an algebra.
Moreover, if
$T\in \Psi(\A)^d$ and $T\in \Psi(\A)^{d'}$, then $TT'\in
\Psi(\A)^{d+d'}$.
\end{lemma}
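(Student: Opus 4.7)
The plan is to establish closure of $\Psi(\A)$ under addition and under multiplication. The order statement $TT' \in \Psi(\A)^{d+d'}$ is then immediate from Lemma~\ref{propOP}(ii), since $T \in OP^d$ and $T' \in OP^{d'}$ yield $TT' \in OP^{d+d'}$.

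For the sum, fix $T, T' \in \Psi(\A)$ and a target integer $N > 0$. Choose decompositions $T = PD^{-2p} + R$ and $T' = P'D^{-2p'} + R'$ with $R, R' \in OP^{-N}$, and assume without loss of generality $p' \geq p$. Since $\DD P_0 = P_0 \DD = 0$ and $P_0^2 = P_0$, one has $D^{2q} = (\DD^2 + P_0)^q = \DD^{2q} + P_0$ for every $q \geq 1$. Consequently
\begin{equation*}
PD^{-2p} = P\,D^{2(p'-p)}\,D^{-2p'} = \bigl(P\,\DD^{2(p'-p)}\bigr)\,D^{-2p'} + (PP_0)\,D^{-2p'},
\end{equation*}
whose first summand has polynomial part in $\DD(\A)$ and whose second lies in $OP^{-\infty}$. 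Therefore $T + T' = \bigl(P\DD^{2(p'-p)} + P'\bigr) D^{-2p'} + R''$ with $R'' \in OP^{-N}$, which is the required decomposition.

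The heart of the argument is multiplication. The key step is to commute $D^{-2p}$ past a fixed $Q \in \DD(\A)$ of $OP$-order $d_Q$ modulo remainders of arbitrarily low order. From $D^{-2}Q - QD^{-2} = -D^{-2}[D^2, Q]D^{-2}$ and iteration one obtains, for every $K \geq 1$,
\begin{equation*}
D^{-2}Q = \sum_{k=0}^{K-1}(-1)^k [D^2,\cdot]^k(Q)\,D^{-2(k+1)} + (-1)^K\, D^{-2}\,[D^2,\cdot]^K(Q)\,D^{-2K}.
\end{equation*}
Since $[D^2,\cdot] - \nabla = [P_0,\cdot]$ takes values in $OP^{-\infty}$, each $[D^2,\cdot]^k(Q)$ coincides with $\nabla^k(Q) \in \DD(\A)$ up to a smoothing term. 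Lemma~\ref{propOP}(v) then gives $\nabla^k(Q) \in OP^{d_Q + k}$, so the closing remainder sits in $OP^{d_Q - K - 2}$, which can be made as negative as desired by choosing $K$ large. Iterating this identity $p$ times (induction on $p$) yields
\begin{equation*}
D^{-2p}Q = \sum_{k=0}^{K-1} c_{p,k}\,\nabla^k(Q)\,D^{-2(p+k)} + R_{K,p},
\end{equation*}
with integer coefficients $c_{p,k}$ and $R_{K,p}$ of arbitrarily low $OP$-order.

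Taking $Q = P'$ and multiplying by $P$ on the left and by $D^{-2p'}$ on the right gives
\begin{equation*}
TT' = \sum_{k=0}^{K-1} c_{p,k}\,P\,\nabla^k(P')\,D^{-2(p+p'+k)} + \mathrm{error},
\end{equation*}
where the error combines $PR_{K,p}D^{-2p'}$ with the contributions of $R$ and $R'$; all can be absorbed into $OP^{-N}$ by choosing $K$ large enough. Each $P\,\nabla^k(P')$ belongs to $\DD(\A)$, since $\DD(\A)$ is stable under $[\DD^2,\cdot]$ and under products. Padding the varying exponents $-2(p+p'+k)$ to a single exponent exactly as in the addition step yields a decomposition $P''D^{-2q} + \tilde R$ with $P'' \in \DD(\A)$ and $\tilde R \in OP^{-N}$, showing $TT' \in \Psi(\A)$. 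The main technical obstacle is the bookkeeping of this iterated commutator expansion: one must track that each successive term stays in $\DD(\A)$ and that the remainders decrease in $OP$-order at a rate allowing one, by choosing the expansion depth $K$ as a function of $N$, to bring the final error below any prescribed threshold.
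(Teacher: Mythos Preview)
Your proof is correct and follows essentially the same approach as the paper. Both arguments rest on the commutator expansion of $D^{-2p}P'$ as a finite sum $\sum_k c_k\,\nabla^k(P')\,D^{-2(p+k)}$ modulo a remainder of controllable $OP$-order; your single-index sum $\sum_{k=0}^{K-1} c_{p,k}\,\nabla^k(Q)\,D^{-2(p+k)}$ is just a regrouping of the paper's multi-index sum $\sum_{k_1,\ldots,k_p=0}^{q}(-1)^{|k|_1}\nabla^{|k|_1}(P')\,D^{-2|k|_1-2p}$, and your explicit ``padding'' via $D^{2q}=\DD^{2q}+P_0$ makes precise what the paper leaves implicit when it writes the final sum as $Q_N D^{-2r_N}$.
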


\begin{proof} 
The non-trivial part of the proof is the stability under the product
of operators. Let $T,
T' \in \Psi(\A)$. There exist $d,d' \in \Z$ such that for any
$N\in\N$, $N> |d|+|d'|$,
there exist $P,P'$ in $\DD(\A)$, $p,p'\in \N_0$, $R\in OP^{-N-d'}$,
$R'\in OP^{-N-d}$ such
that $T=PD^{-2p}+R$, $T'=P'D^{-2p'}+R'$, $PD^{-2p}\in OP^d$ and
$P'D^{-2p'}\in
OP^{d'}$.

Thus, $TT'=PD^{-2p}P'D^{-2p'} + RP'D^{-2p'}+PD^{-2p}R' +RR'$.

We also have $RP'D^{-2p'}\in OP^{-N-d'+d'} = OP^{-N}$ and similarly,
$PD^{-2p}R'
\in OP^{-N}$. Since $RR'\in OP^{-2N}$, we get
$$
TT'\sim PD^{-2p}P'D^{-2p'} \mod OP^{-N}.
$$
If $p=0$, then $TT'\sim QD^{-2p'} \mod OP^{-N}$ where $Q=PP'\in
\DD(\A)$ and
$QD^{-2p'}\in OP^{d+d'}$. Suppose $p\neq 0$. A recurrence proves that
for any $q\in
\N_0$,
$$
D^{-2}P' \sim \sum_{k=0}^q (-1)^k \nabla^k(P') D^{-2k-2}
+(-1)^{q+1}D^{-2}
\nabla^{q+1}(P') D^{-2q-2} \mod OP^{-\infty}\, .
$$
By Lemma \ref{propOP} $(v)$, the remainder is in $OP^{d'+2p'-q-3}$,
since $P'\in
OP^{d'+2p'}$. Another recurrence gives for any $q\in \N_0$,
$$
D^{-2p}P'\sim \sum_{k_1,\cdots,k_p=0}^q (-1)^{|k|_1}
\nabla^{|k|_1}(P')
D^{-2|k|_1-2p} \, \mod  OP^{d'+2p' -q -1 - 2p}.
$$
Thus, with $q_N=N+d+d'-1$,
$$
TT'\sim \sum_{k_1,\cdots,k_p=0}^{q_N} (-1)^{|k|_1} P
\nabla^{|k|_1}(P')
D^{-2|k|_1-2(p+p')} \, \mod OP^{-N}.
$$
The last sum can be written $Q_N D^{-2r_N}$ where $r_N:=p\,
q_N+(p+p')$. Since $Q_N\in
\DD(\A)$ and $Q_N D^{-2r_N}\in OP^{d+d'}$, the result follows.
\end{proof}

It is convenient to also introduce

\begin{definition}
\label{defpseudo1} Let $\DD_{1}(\A)$ be the algebra generated by
$\A$, $J\A J^{-1}$ and $\DD$, and $\Psi_{1}(\A)$ be the set of
pseudodifferential operators
constructed as before with $\DD_{1}(\A)$ instead of $\DD(\A)$. Note
that
$\Psi_{1}(\A)$ is subalgebra of $\Psi(\A)$.
\end{definition}

Remark that $\Psi_1(\A)$ does not
necessarily contain operators such as $|D|^k$ where $k\in \Z$ is odd.
This algebra
is similar to the one defined in \cite{CC1}.

\subsection{Zeta functions, noncommutative integral and spectral action}

For any operator $B$ and if $X$ is either $D$ or
$D_A$, we define
\begin{align*}
{\zeta}_X^B(s)&:= \Tr\big(B|X|^{-s}\big ),\\
\zeta_X(s)&:= \Tr \big(|X|^{-s}\big).
\end{align*}

\medskip

{\it The dimension spectrum} $Sd(\A,\H,\DD)$ of a
spectral triple has been defined in
\cite{Cgeom, CM}. It is extended here to pay attention
to the operator $J$ and to our
definition of pseudodifferential operator.

\begin{definition}
The spectrum dimension of the spectral triple is
the subset $Sd(\A,\H,\DD)$ of all poles of the functions
$\zeta_D^P\ :\ s\mapsto \Tr \big(P |D|^{-s}\big)$
where $P$ is any pseudodifferential operator in $OP^0$.
The spectral triple $(\A,\H,\DD)$ is simple when these
poles are all simple.
\end{definition}

\begin{remark}\label{remark-spectrum} If $Sp(\A,\H,\DD)$ denotes
the set of all poles of
the functions $s\mapsto \Tr \big(P |D|^{-s}\big)$ where $P$
is any pseudodifferential operator, then,
$Sd(\A,\H,\DD) \subseteq Sp(\A,\H,\DD)$.

When $Sp(\A,\H,\DD)=\Z$, $Sd(\A,\H,\DD) = \set{n-k \ : \ k\in \N_0}$:
indeed, if  $P$ is a pseudodifferential operator
in $OP^0$, and $q\in \N$ is such that $q>n$, $P|D|^{-s}$
is in $OP^{-\Re(s)}$ so
is trace-class for $s$ in a neighborhood of $q$; as a
consequence, $q$ cannot be a pole of $s\mapsto \Tr
\big(P|D|^{-s}\big)$.
\end{remark}

\begin{remark} $Sp(\A,\H,\DD)$ is also the set of
poles of functions $s\mapsto \Tr \big(B |D|^{-s-2p}\big)$
where $p\in \N_0$ and $B\in \DD(\A)$.
\end{remark}

Introducing the notation (recall that $\nabla (T)=[\DD^2,T]$) for an
operator $T$,
$$
\eps(T):=\nabla(T)D^{-2},
$$
we get from \cite[(2.44)]{CC1} the following expansion for
$T\in OP^q$
\begin{equation}\label{one-par}
\sigma_{z}(T)\sim \sum_{r=0}^{N} g(z,r) \,\eps^r(T)  \mod OP^{-N-1+q}
\end{equation}
where $g(z,r):=
\tfrac{1}{r!}(\tfrac{z}{2})\cdots(\tfrac{z}{2}-(r-1))=\genfrac(){0pt}{1}{z/2}{r}$
with the
convention $g(z,0):=1$.

\noindent We define the noncommutative integral by
$$
\ncint T:=\underset{s=0}{\Res}\ \zeta_D^{T}(s)
=\underset{s=0}{\Res}\ \Tr\,\big(T|D|^{-s}\big).
$$
The noncommutative integral plays the role of the Wodzicki residue in a spectral triple setting.
\begin{prop}
\label{tracenc}
\cite{CM}
If the spectral triple is simple, $\ncint$ is a trace on $\Psi(\A)$.
\end{prop}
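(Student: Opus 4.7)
The plan is to verify that $\ncint[T_1, T_2] = 0$ for every $T_1, T_2 \in \Psi(\A)$, i.e.\ that the meromorphic function $s \mapsto \Tr\bigl([T_1, T_2]|D|^{-s}\bigr)$ has vanishing residue at $s = 0$. For $\Re(s)$ large enough that the operators below are trace class, cyclicity of the trace combined with $|D|^{-s}T_2 = \sigma_{-s}(T_2)|D|^{-s}$ yields
$$\Tr\bigl([T_1, T_2]|D|^{-s}\bigr) = \Tr\bigl(T_1 [T_2, |D|^{-s}]\bigr) = \Tr\bigl(T_1 (T_2 - \sigma_{-s}(T_2))|D|^{-s}\bigr).$$
This identity then extends by meromorphic continuation to a punctured neighborhood of $0$.

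Next I apply the asymptotic expansion \eqref{one-par}: for any $N \in \N$, writing $q$ for the order of $T_2$,
$$T_2 - \sigma_{-s}(T_2) = -\sum_{r=1}^{N} g(-s, r)\, \eps^r(T_2) + R_N(s), \qquad R_N(s) \in OP^{q-N-1}.$$
Substituting and using linearity of the trace gives
$$\Tr\bigl([T_1,T_2]|D|^{-s}\bigr) = -\sum_{r=1}^{N} g(-s, r)\, \Tr\bigl(T_1\,\eps^r(T_2)\,|D|^{-s}\bigr) - \Tr\bigl(T_1 R_N(s) |D|^{-s}\bigr).$$
The crucial observation is that $g(-s, r) = \binom{-s/2}{r}$ has a simple zero at $s = 0$ for every $r \geq 1$. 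By Lemma \ref{pdoalg} each product $T_1\eps^r(T_2)$ lies in $\Psi(\A)$, so the function $s \mapsto \Tr\bigl(T_1\,\eps^r(T_2)\,|D|^{-s}\bigr)$ continues meromorphically and, by the simplicity hypothesis on the dimension spectrum, has at most a simple pole at $s = 0$. Multiplying by $g(-s, r)$ therefore produces a function holomorphic at $0$, so its residue at $s = 0$ vanishes.

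To dispose of the remainder I choose $N$ large enough that the order of $T_1 R_N(s)|D|^{-s}$ is strictly less than $-n$ on a neighborhood of $s = 0$, making that operator trace class with trace holomorphic near $s = 0$, hence with zero residue. Summing all contributions gives $\Res_{s=0}\Tr\bigl([T_1, T_2]|D|^{-s}\bigr) = 0$, which is the tracial property for $\ncint$. The main subtlety is controlling the $s$-dependence of the remainder $R_N(s)$ from \eqref{one-par} uniformly on a neighborhood of $0$, so that residues may legitimately be taken termwise, and justifying the cyclic manipulation of the trace in the half-plane of absolute convergence before invoking meromorphic continuation.
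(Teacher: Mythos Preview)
Your argument is correct and uses the same ingredients as the paper: the expansion \eqref{one-par}, the vanishing $g(0,r)=0$ for $r\geq 1$, and the simple-pole hypothesis. The organization differs slightly. You invoke cyclicity once at the outset to get $\Tr([T_1,T_2]|D|^{-s})=\Tr(T_1[T_2,|D|^{-s}])$ and then expand once; the paper instead first expands $[Q,|D|^{-s}]$ to obtain $\ncint PQ=\Res_{s=0}\Tr(P|D|^{-s}Q)$, then splits $|D|^{-s}=|D|^{-s/2}|D|^{-s/2}$ to cycle rigorously (with $P|D|^{-s/2}$ trace-class and $|D|^{-s/2}Q$ bounded) and expands $\sigma_{-s/2}(QP)$ a second time. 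Your packaging saves one expansion, but the cyclicity you flag as a subtlety is exactly what the paper's $|D|^{-s/2}$ splitting is there to justify: since $T_2$ is unbounded, the identity $\Tr(T_2T_1|D|^{-s})=\Tr(T_1|D|^{-s}T_2)$ needs that factorization to be made rigorous for $\Re(s)$ large. With that gap filled your proof is complete and equivalent to the paper's.
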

\begin{proof}
Let $P\in OP^{k_1}, \,Q\in OP^{k_2} \in \Psi(\A)$. With
$[Q,|D|^{-s}]= \big(Q-\sigma_{-s}(Q)\big) \,|D|^{-s}$ and the
equivalence
$Q-\sigma_{-s}(Q) \sim  -\sum_{r=1}^N g(-s,r) \, \eps^r(Q) \mod
OP^{-N-1+k_2}$,
we get
$$
P[Q,|D|^{-s}] \sim -\sum_{r=1}^N  g(-s,r)\,P \eps^r(Q) |D|^{-s} \mod
OP^{-N-1+k_1+k_2-\Re(s)}
$$
which gives, if we choose $N=n+k_1+k_2$,
$$
\underset{s=0}{\Res}\ \Tr\big( P
[Q,|D|^{-s}] \big)= -\sum_{r=1}^{n+k_1+k_2}\underset{s=0}{\Res}\
g(-s,r) \Tr \big( P \eps^r(Q)|D|^{-s} \big).
$$
By hypothesis $s\mapsto \Tr \big( P \eps^r(Q)|D|^{-s} \big)$ has only
simple
poles. Thus, since $s=0$ is a zero of the analytic function $s\mapsto
g(-s,r)$ for any
$r\geq 1$, we have $\underset{s=0}{\Res}\ g(-s,r) \,\Tr \big( P
 \eps^r(Q)|D|^{-s}\big) =0$, which entails that
$\underset{s=0}{\Res}\ \Tr\big(
P[Q,|D|^{-s}]\big)=0$ and thus
\begin{align*}
\ncint PQ= \underset{s=0}{\Res}\ \Tr\big( P|D|^{-s} Q\big)\, .
\end{align*}
When $s\in \C$ with $\Re(s)> 2 \max(k_1+n+1,k_2)$, the operator $P
|D|^{-s/2}$ is trace-class while $|D|^{-s/2} Q$ is bounded, so
$\Tr\big( P|D|^{-s} Q\big) = \Tr \big(|D|^{-s/2} QP|D|^{-s/2}\big) =
\Tr\big(\sigma_{-s/2}(Q P)|D|^{-s}\big)$.
Thus, using (\ref{one-par}) again,
$$
\underset{s=0}{\Res}\ \Tr\big( P |D|^{-s} Q\big)  = \ncint Q P
 +\sum_{r=1}^{n+k_1+k_2} \underset{s=0}{\Res}\ g(-s/2,r)
\Tr\big(\eps^r(Q P)
|D|^{-s}\big).
$$
As before, for any $r\geq 1$, $\underset{s=0}{\Res}\ g(-s/2,r)
\Tr\big(\eps^r(Q P) |D|^{-s}\big)=0$ since $g(0,r) =0$ and the
spectral triple
is simple. Finally,
\begin{align*}
\underset{s=0}{\Res}\ \Tr\big( P |D|^{-s} Q\big) = \ncint Q P.
\end{align*}
which yields the result.
\end{proof}

On a spectral triple $(\A,\H,\DD)$, the role of the action is played by
the ``spectral action'' as introduced by
A. Chamseddine and A. Connes:
\begin{equation}
\SS(\DD_{A},\Phi,\Lambda):=\Tr \big( \Phi( \DD_{A} /\Lambda) \big)\label{action}
\end{equation}
where $\Phi$ is any even positive cut-off function which could
be replaced by a step function up to some mathematical difficulties
investigated in \cite{Odysseus} and $\Lambda$ fixes the mass scale.
This means that $\SS$ counts the
spectral values of $\vert \DD_{A} \vert$ less than the mass scale
$\Lambda$ (note that the resolvent set of $\DD_{A}$ is compact since, by
assumption, the same is true for $\DD$, see Lemma \ref{compres}
below).
The Chamseddine--Connes spectral action principle asserts that (see \cite[p. 197]{ConnesMarcolli}) the spectral action \emph{is the fundamental action functional S that can be used both at the classical level to compare different geometric spaces and at the quantum level in the functional integral formulation, after Wick rotation to Euclidean signature}. In other words, the functional $\SS(\DD_A,\Phi,\Lambda)$ which is related to the spectrum of the Dirac operator $\DD$, contains all physical information of the (geometrized) quantum field theory associated to the triple $(\A,\H,\DD)$. It is therefore crucial to be able to compute it on some fundamental examples. This spectral action is known on few examples:
\cite{Carminati,CC1,CCM,ConnesMarcolli,MCC,GI2002,
GIV,GIVas,GW,Knecht}. We will study the case of the noncommutative torus in chapter 2, and the case of $SU_q(2)$ in chapter 3. We shall investigate in chapter 4 some questions about tadpoles, which are the linear terms in $A$ in the spectral action.
In the case of a spectral triple with simple dimension spectrum, we have (see for instance \cite[Theorem 1.145]{ConnesMarcolli})
\begin{align}
    \SS(\DD_{A},\Phi,\Lambda) \, = \,\sum_{0<k\in Sd^+} \Phi_{k}\,
    \Lambda^{k} \ncint \vert D_{A}\vert^{-k} + \Phi(0) \,
    \zeta_{D_{A}}(0) +\mathcal{O}(\Lambda^{-1}),\label{formuleaction}
\end{align}
where $\Phi_{k}= \half\int_{0}^{\infty} \Phi(t) \, t^{k/2-1} \, dt$ and
$Sd^+$ is the strictly positive part of the spectrum dimension of $(\A,\H,\DD)$. Thus,
the main problem is the computation of the $\ncint \vert D_{A}\vert^{-k}$, $\zeta_{D_{A}}(0)$ terms. We consider this question in the following section.

\section{Residues of $\zeta_{D_A}$ for a spectral triple with
simple dimension spectrum}

We fix a regular real spectral triple $(\A,\H,\DD,J)$ of
dimension $n$ and a self-adjoint 1-form $A$.

\noindent Recall that
\begin{align*}
\DD_A &:=\DD+\wt A \text{  where } \wt A:= A +\eps JAJ^{-1} ,\\
 D_A &:= \DD_A + P_A
\end{align*}
where $P_A$ is the projection on $\Ker \DD_A$. Remark that
$\wt A \in\DD(\A)\cap OP^0$ and $\DD_A\in \DD(\A)\cap OP^1$.

We denote
$$
V_A:= P_A - P_0.
$$
As the following lemma shows, $V_A$ is a smoothing operator:

\begin{lemma}
\label{finiterank}
$(i)$ $\bigcap_{k\geq 1} \Dom (\DD_A)^{k} \subseteq \bigcap_{k\geq 1}
\Dom |D|^k$.

$(ii)$ $\Ker \DD_A \subseteq \bigcap_{k\geq 1} \Dom |D|^k$.

$(iii)$ For any $\a, \beta \in \R$, $|D|^\beta P_A |D|^\a$ is bounded.

$(iv)$ $P_A \in OP^{-\infty}$.
\end{lemma}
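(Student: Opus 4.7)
The plan is to prove the four parts in sequence, each building on the previous. For $(i)$, the key is that $\wt A \in OP^0$: since the spectral triple is regular, $\wt A$ and all its iterated commutators $\delta^j(\wt A)$ are bounded, so $\wt A$ preserves each Sobolev-type space $\Dom |D|^k$. I would proceed by induction on $k$ to prove the stronger statement $\Dom \DD_A^k = \Dom \DD^k = \Dom |D|^k$. The base case $\Dom \DD_A = \Dom \DD$ holds because $\wt A$ is a bounded perturbation of the self-adjoint operator $\DD$. For the inductive step, writing $\DD \xi = \DD_A \xi - \wt A \xi$: if $\xi \in \Dom \DD_A^{k+1}$ then $\DD_A \xi \in \Dom \DD_A^k = \Dom \DD^k$ by the induction hypothesis, and $\wt A \xi$ also lies in $\Dom \DD^k$ since $\wt A$ preserves that space, so $\DD \xi \in \Dom \DD^k$, i.e.\ $\xi \in \Dom \DD^{k+1}$. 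The reverse inclusion is symmetric.

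Part $(ii)$ is then immediate: any $\xi \in \Ker \DD_A$ satisfies $\DD_A^k \xi = 0$ for all $k$, so $\xi \in \bigcap_k \Dom \DD_A^k$ and $(i)$ applies. For $(iii)$, Lemma \ref{compres} shows that $\Ker \DD_A$ is finite-dimensional, so $P_A \xi = \sum_{i=1}^N \langle \eta_i, \xi \rangle \, \eta_i$ for an orthonormal basis $(\eta_i)$ of $\Ker \DD_A$. By $(ii)$ each $\eta_i$ lies in $\bigcap_k \Dom |D|^k$, which is contained in $\Dom |D|^\a$ for every $\a > 0$ (and equals $\H$ for $\a \leq 0$), so $|D|^\a \eta_i$ is well defined for every $\a \in \R$. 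Using self-adjointness of $|D|^\a$, the operator $|D|^\beta P_A |D|^\a$ extends to $\xi \mapsto \sum_i \langle |D|^\a \eta_i, \xi \rangle \, |D|^\beta \eta_i$, a finite-rank bounded operator.

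For $(iv)$, I must show $P_A |D|^\a \in OP^0$ for every $\a \geq 0$, i.e.\ $\delta^k(P_A |D|^\a)$ is bounded for every $k$. Since $|D|$ commutes with $|D|^\a$, one has $\delta^k(P_A |D|^\a) = \delta^k(P_A) \, |D|^\a$, so it suffices to control $\delta^k(P_A) |D|^\a$. For a rank-one operator $\xi \mapsto \langle v, \xi \rangle u$ with $u, v$ smooth, $\delta$ acts by $\xi \mapsto \langle v, \xi \rangle |D|u - \langle |D|v, \xi \rangle u$; iterating expresses $\delta^k(P_A)$ as a finite sum of rank-one operators built from the vectors $|D|^j \eta_i$, and right-multiplication by $|D|^\a$ merely shifts the index appearing in the bra factor, with each resulting operator bounded by $(iii)$.

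The main subtlety is $(i)$: one must carefully lift the algebraic identity $\DD = \DD_A - \wt A$ to an equality of domains of arbitrary powers, exploiting that $\wt A \in OP^0$ preserves each $\Dom |D|^k$. Once $(i)$ is secured, the remaining parts are essentially structural consequences of the fact that $P_A$ is a finite-rank projection onto smooth vectors, together with the simple action of the derivation $\delta$ on rank-one operators built from such vectors.
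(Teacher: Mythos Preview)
Your proof is correct and close in spirit to the paper's, but with two small differences worth noting.

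For $(i)$, the paper does not prove the equality $\Dom \DD_A^k = \Dom |D|^k$; it only proves the stated inclusion, via the observation that $R_p := \DD_A^p - \DD^p \in OP^{p-1}$ maps $\Dom |D|^p$ into $\Dom |D|$ (Remark~\ref{oprem}), and then runs a recurrence on~$j$ inside a fixed $\Dom(\DD_A)^k$. Your induction on $k$, using instead that $\wt A \in OP^0$ preserves each $\Dom |D|^k$ (a consequence of Lemma~\ref{propOP}\,$(i)$, since $\sigma_k(\wt A)$ is bounded), is equally valid and yields the stronger equality of domains with essentially no extra work. For $(iii)$, the paper argues first that $|D|^\alpha P_A$ is bounded by restricting $|D|^\alpha$ to the finite-dimensional space $\Ima P_A$, and then handles $P_A|D|^\alpha$ by a duality estimate; your explicit rank-one decomposition $P_A = \sum_i \langle \eta_i,\cdot\rangle \eta_i$ with $\eta_i$ smooth, together with self-adjointness of $|D|^\alpha$, is a cleaner and more standard way to package the same idea. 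For $(iv)$, the paper writes $\delta^k(P_A)|D|^t$ directly as a linear combination of terms $|D|^\beta P_A |D|^\alpha$ (via the binomial expansion of the iterated commutator) and invokes $(iii)$; your computation of $\delta$ on rank-one operators is exactly the same expansion viewed through the basis $(\eta_i)$.
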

\begin{proof}
$(i)$ Let us define for any $p\in \N$, $R_p := (\DD_A)^p -\DD^p$, so
$R_p \in OP^{p-1}$ and $R_p \big(\Dom |D|^p\big)\subseteq \Dom |D|$
(see Remark \ref{oprem}).

Let us fix $k\in \N$, $k\geq 2$. Since $\Dom \DD_A = \Dom \DD =\Dom
|D|$, we have
$$
\Dom (\DD_A)^k = \set{\phi \in \Dom |D| \ : \ (\DD^j + R_j)\,\phi
\in \Dom |D| \ , \ \forall j\ \  1\leq j\leq k-1  }.
$$
Let $\phi \in \Dom (\DD_A)^k$. We prove by recurrence that for any
$j\in\set{1,\cdots,k-1}$, $\phi \in \Dom |D|^{j+1}$:

We have $\phi\in \Dom |D|$ and $(\DD + R_1)\, \phi \in \Dom |D|$.
Thus, since $R_1\,\phi \in \Dom |D|$, $\DD \phi \in \Dom |D|$, which
proves that $\phi \in \Dom |D|^2$.
Hence, case $j=1$ is done.

Suppose now that $\phi \in \Dom |D|^{j+1}$ for a $j \in
\set{1,\cdots,k-2}$.
Since $(\DD^{j+1} + R_{j+1})\, \phi \in
\Dom |D|$, and $R_{j+1}\, \phi \in \Dom |D|$, we get $\DD^{j+1}\,
\phi \in \Dom |D|$,
which proves that $\phi\in \Dom |D|^{j+2}$.

Finally, if we set $j=k-1$, we get $\phi \in \Dom |D|^{k}$,
so $\Dom (\DD_A)^k \subseteq \Dom |D|^k$.

$(ii)$ follows from $\Ker \DD_A \subseteq \bigcap_{k\geq 1} \Dom
(\DD_A)^k$ and $(i)$.

$(iii)$ Let us first check that $|D|^\a P_A$ is bounded. We define
$D_0$
as the operator with domain $\Dom D_0 = \Ima P_A \cap \Dom |D|^\a$
and such that $D_0\, \phi = |D|^\a\, \phi.$ Since $\Dom D_0$ is
finite dimensional,
$D_0$ extends as a bounded operator on $\H$ with finite rank.
We have
$$
\sup_{\phi \in \Dom |D|^\a P_A,\ \norm{\phi}\leq 1} \norm{|D|^\a
P_A\, \phi} \leq
\sup_{\phi \in \Dom D_0,\ \norm{\phi}\leq 1}
\norm{|D|^\a\, \phi} = \norm{D_0}<\infty
$$
so $|D|^\a P_A$ is bounded. We can remark that by $(ii)$, $\Dom D_0 =
\Ima P_A$ and
$\Dom |D|^\a P_A = \H$.

Let us prove now that $P_A |D|^\a$ is bounded:
Let $\phi\in \Dom P_A |D|^\a = \Dom |D|^\a$. By $(ii)$, we have $\Ima
P_A \subseteq \Dom |D|^\a$
so we get
\begin{align*}
\norm{P_A |D|^\a\,\phi} & \leq
\sup_{\psi \in \Ima P_A,\ \norm{\psi}\leq 1} |<\psi,|D|^\a\, \phi>|
\leq \sup_{\psi \in
\Ima P_A,\ \norm{\psi}\leq 1} |<|D|^\a\psi,\phi>| \\
&\leq \sup_{\psi \in
\Ima P_A,\ \norm{\psi}\leq 1} \norm{|D|^\a\psi}\norm{\phi} =
\norm{D_0} \norm{\phi}.
\end{align*}

$(iv)$ For any $k\in \N_0$ and $t\in \R$, $\delta^k(P_A)|D|^t$ is a
linear combination of terms of
the form $|D|^\beta P_A |D|^\a$, so the result follows from $(iii)$.
\end{proof}

\begin{remark}
We will see later on the noncommutative torus example how
important is the difference
between $\DD_{A}$ and $\DD+A$. In particular, the inclusion
$\Ker \DD \subseteq \Ker \DD +
A$ is not satisfied since $A$ does not preserve $\Ker \DD$
contrarily to $\wt A$.
\end{remark}

The coefficient of the nonconstant term $\Lambda^k$ ($k>0$)
in the expansion (\ref{formuleaction}) of the spectral action
$S(\DD_A,\Phi,\Lambda)$ is equal to the
residue of $\zeta_{D_A}(s)$ at $k$. We will see in this section
how we can compute these
residues in term of noncommutative integral of certain operators.

Define for any operator $T$, $p\in \N$, $s\in \C$,
$$
K_p(T,s):=(-\tfrac{s}{2})^p\int_{0\leq t_1\leq\cdots\leq t_p\leq 1}
\sigma_{-st_1}(T)\cdots\sigma_{-st_p}(T)\, dt
$$
with $dt:=dt_1\cdots dt_p$.

Remark that if $T\in OP^\a$, then $\sigma_z(T)\in OP^\a$ for $z\in \C$
and $K_p(T,s) \in OP^{\a p}$.

Let us define
\begin{align*}
X &:= \DD_{A}^2-\DD^2 =\wt A \DD + \DD \wt A + \wt A^2 ,\\
X_V &:= X+V_A,
\end{align*}
thus $X\in \DD_1(\A)\cap OP^1$ and by Lemma \ref{finiterank},
\begin{equation}\label{xvsim}
X_V \sim X \mod OP^{-\infty}.
\end{equation}

We will use
$$
Y:=\log(D_A^2) -\log (D^2)
$$
which makes sense since $D_A^2 = \DD_A^2 + P_A$ is invertible for any
$A$.

By definition of $X_V$, we get
$$
Y= \log (D^2 + X_V) -\log (D^2).
$$
Remark that most of the results that we will present here are based on the fact that $D_A-D$ is a pseudodifferential operator of order 0.
\begin{lemma}
    \label{2dev}
\cite{CC1}

$(i)$ $Y$ is a pseudodifferential operator in $OP^{-1}$ with the
following
expansion for any $N\in\N$
$$
Y \sim \sum_{p=1}^N\sum_{k_1,\cdots,k_p
=0}^{N-p}\tfrac{(-1)^{|k|_1+p+1}}{|k|_1+p}
\nabla^{k_p}(X\nabla^{k_{p-1}}(\cdots
X\nabla^{k_1}(X)\cdots)) D^{-2(|k|_1+p)} \mod OP^{-N-1}.
$$

$(ii)$ For any $N\in\N$ and $s\in \C$,
\begin{align}
\label{expansion} |D_A|^{-s} \sim |D|^{-s} + \sum_{p=1}^N K_p(Y,s)
|D|^{-s} \mod
OP^{-N-1-\Re(s)}.
\end{align}
\end{lemma}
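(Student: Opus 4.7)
The plan is to prove (i) by combining the integral representation of the logarithm with a Neumann expansion of the perturbed resolvent, and then to deduce (ii) by a Dyson-type expansion of $|D_A|^{-s}$ in which the role of the perturbation is played by $Y$.

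For (i), I would start from the Laplace-type representation
$$Y=\log(D^2+X_V)-\log(D^2)=\int_0^{\infty}\bigl[(\lambda+D^2)^{-1}-(\lambda+D^2+X_V)^{-1}\bigr]\,d\lambda,$$
valid since $D^2$ and $D_A^2=D^2+X_V$ are positive invertible. By Lemma \ref{finiterank} we have $V_A\in OP^{-\infty}$, so modulo a smoothing remainder one may replace $X_V$ by $X$. Expanding the perturbed resolvent by a truncated Neumann series in $X$ produces a finite sum of integrals of the shape $\int_0^{\infty}(\lambda+D^2)^{-1}[X(\lambda+D^2)^{-1}]^p\,d\lambda$, up to an $OP^{-N-1}$ tail. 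Each such integrand is then brought to normal form by iteratively commuting every $X$ to the left through the intervening resolvents via
$$(\lambda+D^2)^{-m}\,T\;\sim\;\sum_{j\ge 0}(-1)^{j}\binom{m+j-1}{j}\nabla^{j}(T)\,(\lambda+D^2)^{-(m+j)},$$
obtained by iterating $[(\lambda+D^2)^{-1},T]=-(\lambda+D^2)^{-1}\nabla(T)(\lambda+D^2)^{-1}$. Regrouping via the Leibniz rule for $\nabla$ and hockey-stick binomial identities assembles the result into the nested form $\nabla^{k_p}\bigl(X\nabla^{k_{p-1}}(\cdots X\nabla^{k_1}(X)\cdots)\bigr)(\lambda+D^2)^{-(|k|_1+p+1)}$. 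The elementary integral $\int_0^{\infty}(\lambda+D^2)^{-(|k|_1+p+1)}\,d\lambda=\tfrac{1}{|k|_1+p}D^{-2(|k|_1+p)}$ supplies the announced denominator and tail, while the Neumann sign $(-1)^{p+1}$ combined with the $(-1)^{|k|_1}$ from the commutator expansions yields the overall $(-1)^{|k|_1+p+1}$. The order statement $Y\in OP^{-1}$ follows from the leading $XD^{-2}$ term and Lemma \ref{propOP}(v), which also guarantees that every term with $|k|_1+p\ge N+1$ lies in $OP^{-N-1}$.

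For (ii), I would apply a Dyson expansion to $A_0:=-\tfrac{s}{2}\log(D^2)$ and $B_0:=-\tfrac{s}{2}Y$, for which $e^{A_0}=|D|^{-s}$ and (by (i)) $e^{A_0+B_0}=|D_A|^{-s}$. The expansion reads
$$e^{A_0+B_0}=\sum_{p=0}^{N}\int_{0\le t_1\le\cdots\le t_p\le 1}e^{t_1A_0}B_0\,e^{(t_2-t_1)A_0}B_0\cdots B_0\,e^{(1-t_p)A_0}\,dt\;+\;R_N.$$
Using $e^{tA_0}=|D|^{-st}$ and the relation $|D|^{-st}T=\sigma_{-st}(T)|D|^{-st}$, I slide every $|D|$-power to the right through the $B_0$'s, collapsing them onto a single $|D|^{-s}$ and producing $\bigl(-\tfrac{s}{2}\bigr)^{p}\!\int\sigma_{-st_1}(Y)\cdots\sigma_{-st_p}(Y)\,dt\cdot|D|^{-s}=K_p(Y,s)|D|^{-s}$. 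Because $Y\in OP^{-1}$ by (i) and $\sigma_z$ preserves the $OP$-filtration (Lemma \ref{propOP}(i)), each $K_p(Y,s)|D|^{-s}$ lies in $OP^{-p-\Re(s)}$, so $R_N\in OP^{-N-1-\Re(s)}$.

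The real work in both parts is analytic bookkeeping rather than algebra. In (i), one must control the convergence of the $\lambda$-integrals of each remainder, calibrate the Neumann and commutator truncation depths in terms of $N$, and verify that the composite error actually sits in $OP^{-N-1}$. In (ii), the delicate point is to justify the Dyson expansion for the unbounded generators $A_0,B_0$ and to bound the iterated integrand uniformly in $(t_1,\dots,t_p)$; this uses the regularity of the one-parameter family $z\mapsto\sigma_z$ together with the fact that every $\sigma_{-st_i}(Y)$ lies in $OP^{-1}$ with seminorms controlled locally uniformly in $s$.
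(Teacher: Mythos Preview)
Your proposal is correct and follows essentially the same approach as the paper: for (i) the paper uses the identical integral representation of $Y$, replaces $X_V$ by $X$ via $V_A\in OP^{-\infty}$, performs the truncated Neumann expansion, and then arrives at the nested $\nabla$-form by the recursion $A_p(X)=LX\,A_{p-1}(X)$ together with $LTL\sim\sum_k(-1)^k\nabla^k(T)L^{k+2}$ (which avoids your hockey-stick regrouping step, but is the same computation); for (ii) the paper likewise invokes the Duhamel/Dyson expansion of $e^{B-(s/2)Y}e^{-B}$ with $B=-\tfrac{s}{2}\log(D^2)$ and collapses the interaction-picture factors using $\sigma_{-st}$, exactly as you describe. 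The analytic bookkeeping you flag (remainder control for the $\lambda$-integral, justification of the Dyson series) is deferred in the paper to \cite{CC1}.
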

\begin{proof}
$(i)$ We follow \cite[Lemma 2.2]{CC1}. By functional calculus,
$Y=\int_0^\infty I(\la)\, d\la$, where
$$
I(\la)\sim\sum_{p=1}^N(-1)^{p+1}\big((D^2+\la)^{-1}X_V\big)^{p}
(D^2+\la)^{-1} \mod OP^{-N-3}.
$$
By (\ref{xvsim}), $\big((D^2+\la)^{-1}X_V\big)^{p} \sim
\big((D^2+\la)^{-1}X\big)^{p} \mod OP^{-\infty}$ and we get
$$
I(\la)\sim\sum_{p=1}^N(-1)^{p+1}\big((D^2+\la)^{-1}X\big)^{p}
(D^2+\la)^{-1} \mod OP^{-N-3}.
$$
We set $A_p(X):=\big((D^2+\la)^{-1}X\big)^{p}(D^2+\la)^{-1}$
and $L:=(D^2+\la)^{-1}\in OP^{-2}$ for a fixed $\la$.
Since $[D^2 + \la,X]\sim \nabla(X) \mod OP^{-\infty}$,
a recurrence proves that if $T$ is an operator
in $OP^{r}$, then, for $q\in \N_0$,
$$
A_1(T)=L T L \sim \sum_{k=0}^q (-1)^k\nabla^k(T) L^{k+2} \mod
OP^{r-q-5}.
$$
With $A_p(X)=LX A_{p-1}(X)$,
another recurrence gives, for any $q\in \N_0$,
$$
A_p(X)\sim \sum_{k_1,\cdots,k_p =0}^q (-1)^{|k|_1}\nabla^{k_p}
(X \nabla^{k_{p-1}}(\cdots X\nabla^{k_1}(X)\cdots)) L^{|k|_1+p+1}
\mod OP^{-q-p-3},
$$
which entails that
$$
I(\la)\sim\sum_{p=1}^N(-1)^{p+1}\sum_{k_1,\cdots,k_p
=0}^{N-p}(-1)^{|k|_1}\nabla^{k_p}(X\nabla^{k_{p-1}}
(\cdots X\nabla^{k_1}(X)\cdots))
L^{|k|_1+p+1} \mod OP^{-N-3}.
$$

With $\int_{0}^\infty (D^2+\la)^{-(|k|_1+p+1)}d\la =
\tfrac{1}{|k|_1+p}
D^{-2(|k|_1+p)}$, we get the result provided we control
the remainders. Such a control is given in \cite[(2.27)]{CC1}.

\noindent $(ii)$ We have $|D_A|^{-s}=e^{B-(s/2)Y}e^{-B}\, |D|^{-s}$
where $B:= (-s/2)\log(D^2)$. Following \cite[Theorem 2.4]{CC1},
we get
\begin{equation}
\label{egalite-DAs}
|D_A|^{-s} = |D|^{-s} + \sum_{p=1}^\infty K_p(Y,s)|D|^{-s}\, .
\end{equation}
and each $K_p(Y,s)$ is in $OP^{-p}$.
\end{proof}

\begin{corollary}
\label{eps-pdo} For any $p\in\N$ and $r_1,\cdots,r_p \in \N_0$,
$\eps^{r_1}(Y)\cdots \eps^{r_p}(Y) \in \Psi_{1}(\A)$.
\end{corollary}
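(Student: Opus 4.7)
The plan is to show that each $\eps^{r_i}(Y)$ individually lies in $\Psi_1(\A)$; the stability of $\Psi_1(\A)$ under products (it is a subalgebra of $\Psi(\A)$ by Definition \ref{defpseudo1}, the argument of Lemma \ref{pdoalg} transferring verbatim) then delivers the claim. So fix $r\in\N_0$ and aim to prove $\eps^r(Y)\in\Psi_1(\A)$.

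Start from the expansion in Lemma \ref{2dev}(i): modulo $OP^{-N-1}$, $Y$ is a finite linear combination of terms $P_{\vec k,p}\,D^{-2(|k|_1+p)}$ with $P_{\vec k,p}=\nabla^{k_p}\bigl(X\nabla^{k_{p-1}}(\cdots X\nabla^{k_1}(X)\cdots)\bigr)$. Each $P_{\vec k,p}$ lies in $\DD_1(\A)$: indeed $X=\wt A\DD+\DD\wt A+\wt A^2\in\DD_1(\A)$ and $\DD_1(\A)$ is stable under $\nabla=[\DD^2,\cdot]$ since $\DD^2\in\DD_1(\A)$. Thus the expansion already exhibits $Y$ as a pseudodifferential operator of $\Psi_1(\A)$.

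The decisive point is that $\nabla$ annihilates every even power of $D^{-1}$. Writing $\DD^2=D^2-P_0$ and noting that $P_0$ commutes with $D$ (because $\Ker\DD$ is $D$-reducing, with $D$ acting as the identity on it), one gets $[\DD^2,D^{-2m}]=0$, whence $\nabla(P\,D^{-2m})=\nabla(P)\,D^{-2m}$ and, by induction, $\eps^r(P\,D^{-2m})=\nabla^r(P)\,D^{-2(m+r)}$ with $\nabla^r(P)\in\DD_1(\A)$. Applying $\eps^r$ termwise to the expansion of $Y$, and using Lemma \ref{propOP}(v) to send the $OP^{-N-1}$ remainder into $OP^{-N-1-r}$ (which is $OP^{-M}$ once $N$ is chosen large enough), one represents $\eps^r(Y)$ modulo $OP^{-M}$ as a finite sum $\sum_\alpha Q_\alpha\,D^{-2m_\alpha}$ with $Q_\alpha\in\DD_1(\A)$.

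To match the single-term format of Definition \ref{defpseudo}, consolidate the different powers of $D^{-2}$ into the largest one: since $P_0\in OP^{-\infty}$ yields $D^{2k}\sim\DD^{2k}\mod OP^{-\infty}$, each summand $Q_\alpha\,D^{-2m_\alpha}$ equals $Q_\alpha\,\DD^{2(m_{\max}-m_\alpha)}\,D^{-2m_{\max}}$ modulo a smoothing operator, and $Q_\alpha\,\DD^{2(m_{\max}-m_\alpha)}\in\DD_1(\A)$. Summing produces $\eps^r(Y)=Q\,D^{-2p}+R$ with $Q\in\DD_1(\A)$, $R\in OP^{-M}$, i.e.\ $\eps^r(Y)\in\Psi_1(\A)$. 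The only real subtlety is order-bookkeeping: $N$ must be chosen sufficiently large at the outset so that the $r$ applications of $\eps$ leave a tail in the desired $OP^{-M}$, and one must verify that the consolidation step does not enlarge the remainder beyond $OP^{-M}$, which is immediate since the discrepancies introduced are all smoothing.
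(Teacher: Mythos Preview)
Your proof is correct and follows essentially the same approach as the paper's: both use the expansion of $Y$ from Lemma~\ref{2dev}(i) to exhibit $Y\in\Psi_1(\A)$, observe that $\nabla$ annihilates $D^{-2m}$ so that $\eps^r$ acts on each term as $\nabla^r(P)\,D^{-2(m+r)}$ with $\nabla^r(P)\in\DD_1(\A)$, and then appeal to the subalgebra property of $\Psi_1(\A)$ for the product. Your version is simply more explicit about the consolidation into single-term form and the order-bookkeeping, which the paper leaves implicit.
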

\begin{proof}
If for any $q\in \N$ and $k=(k_1,\cdots,k_q)\in \N_0^q$,
$$
\Ga_q^k(X):=\tfrac{(-1)^{|k|_1+q+1}}{|k|_1+q} \nabla^{k_q}
(X\nabla^{k_{q-1}}(\cdots X\nabla^{k_1}(X)\cdots)),
$$
then, $\Ga_q^k(X) \in OP^{|k|_1+q}$. For any $N\in\N$,
\begin{equation}\label{Ydev}
Y \sim \sum_{q=1}^N\sum_{k_1,\cdots,k_q =0}^{N-q} \Ga_q^k(X)
D^{-2(|k|_1+q)} \mod OP^{-N-1}.
\end{equation}
Note that the $\Ga_q^k(X)$ are in $\DD_{1}(\A)$, which, with
(\ref{Ydev}) proves that $Y$
and thus $\eps^r(Y)= \nabla^r(Y)D^{-2r}$, are also in $\Psi_{1}(\A)$.
\end{proof}

We remark, as in \cite{Cours}, that the fluctuations leave invariant
the first term of the
spectral action (\ref{formuleaction}). This is a generalization of
the fact that in the commutative case, the noncommutative integral $\ncint |D|^{-n}$, where $n$ is the dimension of the manifold, only depends on the principal symbol of the
Dirac operator $\DD$ and this symbol is stable by adding a gauge
potential like in $\DD+A$.
Note however that the symmetrized gauge potential
$A+\epsilon JAJ^{-1}$ is always zero in
this case for any selfadjoint one-form $A$.

\begin{lemma} If the spectral triple is simple,
\begin{align}
\zeta_{D_{A}}(0)-\zeta_{D}(0)=\sum_{q=1}^{n} \tfrac{(-1)^{q}}{q}
\ncint (\wt AD^{-1})^{q}. \label{termconstanttilde}
\end{align}
\end{lemma}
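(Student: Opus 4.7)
The plan is in two stages: first I would reduce $\zeta_{D_A}(0)-\zeta_D(0)$ to $-\tfrac{1}{2}\ncint Y$, then I would expand $\ncint Y$ as a logarithmic series in $\wt A D^{-1}$.

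For the first stage, I would take traces in the exact expansion (\ref{egalite-DAs}) of Lemma \ref{2dev} and examine each term $\Tr(K_p(Y,s)|D|^{-s})$ as $s\to 0$. Because $K_p(Y,s)$ carries the explicit factor $(-s/2)^p$ while the simple-spectrum hypothesis forces the pole order of $\Tr(\sigma_{-st_1}(Y)\cdots\sigma_{-st_p}(Y)|D|^{-s})$ at $s=0$ to be at most one (the product lies in $\Psi(\A)\cap OP^{-p}$ by Lemma \ref{pdoalg}), only the term $p=1$ survives. Trace cyclicity collapses the $p=1$ contribution to $-\tfrac{s}{2}\Tr(Y|D|^{-s})$, whose value at $s=0$ is $-\tfrac{1}{2}\ncint Y$.

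For the second stage, I would introduce the linear homotopy $D_u:=D+u\wt A$ for $u\in[0,1]$ and set $\phi(u):=\ncint[\log D_u^2-\log D^2]$. Since $V_A\in OP^{-\infty}$ by Lemma \ref{finiterank}, the operators $D_A^2$ and $(D+\wt A)^2$ agree modulo a smoothing term; factoring $D_A^2=(D+\wt A)^2\bigl(1+(D+\wt A)^{-2}T\bigr)$ with smoothing $T$ and invoking the BCH/trace-cyclicity on the $\log$ series yields $\phi(1)=\ncint Y$. Applying the standard trace identity $\tau(\tfrac{d}{du}\log M_u)=\tau(M_u^{-1}\dot M_u)$ with $\dot D_u^2=\{\wt A,D_u\}$ and cyclicity of $\ncint$ (Proposition \ref{tracenc}), one finds $\phi'(u)=2\,\ncint\wt A\, D_u^{-1}$. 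Expanding the Neumann series $D_u^{-1}=\sum_{k\geq 0}(-u)^k(D^{-1}\wt A)^k D^{-1}$ and integrating over $u\in[0,1]$ produces
\begin{equation*}
\ncint Y=2\sum_{q\geq 1}\tfrac{(-1)^{q+1}}{q}\ncint(\wt A D^{-1})^q.
\end{equation*}
Since $(\wt A D^{-1})^q\in OP^{-q}$, the simple-spectrum assumption forces $\ncint(\wt A D^{-1})^q=0$ for $q>n$, so the sum truncates at $q=n$ and combining with the first stage delivers the claimed identity.

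The principal technical hurdle is that $D_u=D+u\wt A$ need not be globally invertible, so $D_u^{-1}$ must be read as a formal perturbation series around $D^{-1}$ whose truncations lie in well-defined $OP^{\bullet}$-classes and whose tails vanish under $\ncint$. Likewise, the BCH-type identities used to pass from $D_A^2$ to $(D+\wt A)^2$ inside the logarithm and to differentiate $\log D_u^2$ term-by-term rest on the fact that commutators and $OP^{-\infty}$-terms are annihilated by the tracial residue $\ncint$, which follows from Proposition \ref{tracenc} and the $OP^{\bullet}$-filtration of Lemma \ref{propOP}.
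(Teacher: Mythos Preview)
Your first stage is correct and is exactly what the paper does: from (\ref{egalite-DAs}) (or rather its truncation (\ref{expansion}), which is what makes the term-by-term trace legitimate) together with the simple-spectrum hypothesis, only the $p=1$ term survives at $s=0$, and trace cyclicity reduces it to $-\tfrac12\ncint Y$. One small imprecision: the product $\sigma_{-st_1}(Y)\cdots\sigma_{-st_p}(Y)$ depends on $s$, so saying ``the product lies in $\Psi(\A)\cap OP^{-p}$'' does not by itself bound the pole order; you need (\ref{one-par}) to expand each $\sigma_{-st_j}(Y)$ as a polynomial in $s$ with coefficients in $\Psi(\A)$ before invoking simplicity. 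The paper does this via the phrase ``with (\ref{one-par})''.

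For the second stage the paper simply writes ``the same proof as in \cite{CC1} gives $-\tfrac12\ncint Y=\sum_{q=1}^n\tfrac{(-1)^q}{q}\ncint(\wt A D^{-1})^q$'', so you are supplying what the paper outsources. Your homotopy argument is a legitimate and rather clean way to do this: since $\ncint$ annihilates $OP^{-(n+1)}$, both $\ncint Y$ and $\ncint\wt A D_u^{-1}$ are determined by finitely many terms of the asymptotic expansions of Lemma~\ref{2dev}(i) and of the Neumann series for $D_u^{-1}$, so all manipulations (including the resolvent formula $\tfrac{d}{du}\log D_u^2$ under $\ncint$ and the passage from $D_A^2$ to $(D+\wt A)^2$) reduce to identities among those finitely many pseudodifferential terms, where cyclicity (Proposition~\ref{tracenc}) and $P_0,V_A\in OP^{-\infty}$ suffice. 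This is exactly the sense in which your formal-series reading of $D_u^{-1}$ is rigorous, and it sidesteps the invertibility issue entirely. What your route buys over a direct term-by-term simplification of the expansion in Lemma~\ref{2dev}(i) is that the logarithmic structure is built in from the start via $\phi'(u)=2\ncint\wt A D_u^{-1}$, so no combinatorics with the $\nabla^{k_j}$ are needed.
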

\begin{proof}
Since the spectral triple is simple, equation (\ref{egalite-DAs})
entails that
$$
\zeta_{D_A}(0)-\zeta_{D}(0) = \Tr (K_1(Y,s)|D|^{-s})_{|s=0} \, .
$$
Thus, with (\ref{one-par}), we get
$\zeta_{D_A}(0)-\zeta_{D}(0) = -\half \ncint Y$. Replacing $A$ by
$\wt A$, the same proof as in \cite{CC1} gives
\begin{align*}
-\half \ncint Y = \sum_{q=1}^{n} \tfrac{(-1)^{q}}{q} \ncint (\wt
AD^{-1})^{q}.
\tag*{\qed}
\end{align*}
\hideqed
\end{proof}

\begin{lemma}
    \label{Res-zeta-n-k}
For any $k\in \N_0$,
$$
\underset{s=n-k}{\Res} \, \zeta_{D_A}(s)=
\underset{s=n-k}{\Res} \,\zeta_{D}(s) +
\sum_{p=1}^k \sum_{r_1,\cdots, r_p =0}^{k-p}
\underset{s=n-k}{\Res} \, h(s,r,p) \, \Tr\big(\eps^{r_1}(Y)
\cdots\eps^{r_p}(Y) |D|^{-s}\big),
$$
where
$$
h(s,r,p):=(-s/2)^p\int_{0\leq t_1\leq \cdots \leq t_p\leq 1}
g(-st_1,r_1)\cdots
g(-st_p,r_p) \, dt\, .
$$
\end{lemma}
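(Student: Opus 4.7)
The plan is to start from Lemma \ref{2dev}$(ii)$, pass to the residue at $s=n-k$, and then re-expand each factor $\sigma_{-st_j}(Y)$ appearing inside $K_p(Y,s)$ via the one-parameter formula (\ref{one-par}) in order to exhibit the operators $\eps^{r_j}(Y)$. Taking $N=k$ in
$$
|D_A|^{-s}\sim |D|^{-s}+\sum_{p=1}^{k}K_p(Y,s)|D|^{-s}\mod OP^{-k-1-\Re(s)},
$$
the remainder has order at most $-n-1+\delta$ for $\Re(s)$ in a small $\delta$-neighborhood of $n-k$ (with $\delta<1$), hence is trace class there with holomorphic trace. Taking trace and residue at $s=n-k$ yields
$$
\underset{s=n-k}{\Res}\zeta_{D_A}(s)=\underset{s=n-k}{\Res}\zeta_{D}(s)+\sum_{p=1}^{k}\underset{s=n-k}{\Res}\Tr\bigl(K_p(Y,s)|D|^{-s}\bigr).
$$

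For each $p$ with $1\le p\le k$, I apply (\ref{one-par}) to each $\sigma_{-st_j}(Y)$: since $Y\in OP^{-1}$,
$$
\sigma_{-st_j}(Y)=\sum_{r_j=0}^{k-p}g(-st_j,r_j)\,\eps^{r_j}(Y)+B_j(s,t_j),\qquad B_j(s,t_j)\in OP^{-k+p-2}.
$$
Multiplying the $p$ factors (noting that the full $\sigma$-factors stay in $OP^{-1}$, so that any cross term containing at least one $B_j$ lies in $OP^{-k+p-2-(p-1)}=OP^{-k-1}$) and integrating against $(-s/2)^p$ over the simplex gives
$$
K_p(Y,s)=\sum_{r_1,\dots,r_p=0}^{k-p}h(s,r,p)\,\eps^{r_1}(Y)\cdots\eps^{r_p}(Y)+R_p(s),\qquad R_p(s)\in OP^{-k-1},
$$
where the identification of the integrated $g$-coefficients with $h(s,r,p)$ is exactly the definition of $h$. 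Then $R_p(s)|D|^{-s}$ has order $\le -n-1+\delta$ near $s=n-k$, so it is trace class with holomorphic trace there and contributes nothing to the residue.

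To complete the identification with the stated formula, it remains to observe that the extra terms in the double sum with $p+|r|_1>k$ contribute no residue. Indeed, by Lemma \ref{propOP}$(v)$ and $\eps^{r}(Y)=\nabla^{r}(Y)D^{-2r}\in OP^{-1-r}$, one has $\eps^{r_1}(Y)\cdots\eps^{r_p}(Y)|D|^{-s}\in OP^{-(p+|r|_1)-(n-k)+\delta}\subset OP^{-n-1+\delta}$ near $s=n-k$, while $h(s,r,p)$ is a polynomial in $s$ (each $g(z,r_j)$ being polynomial in $z$ of degree $r_j$); hence such contributions are holomorphic at $s=n-k$. The main obstacle is the bookkeeping of orders, namely showing simultaneously that the remainder of the initial $|D_A|^{-s}$-expansion and the cross terms from the $\sigma$-expansion both land in $OP^{-n-1+\delta}$ near $s=n-k$; once this is under control, linearity of the residue and the direct identification of $h(s,r,p)$ from the integrated $g$-coefficients finish the argument.
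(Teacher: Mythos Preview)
Your proof is correct and follows the same route as the paper: apply Lemma \ref{2dev}$(ii)$ with $N=k$, take the residue at $s=n-k$, then expand each $\sigma_{-st_j}(Y)$ via (\ref{one-par}) truncated at order $k-p$ and control the cross terms by order counting.

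One remark on your final paragraph: it is superfluous. The summation in the stated formula is over all $(r_1,\dots,r_p)\in\{0,\dots,k-p\}^p$, not over $p+|r|_1\le k$, so once you have written $K_p(Y,s)=\sum_{r_1,\dots,r_p=0}^{k-p}h(s,r,p)\,\eps^{r_1}(Y)\cdots\eps^{r_p}(Y)+R_p(s)$ with $R_p(s)\in OP^{-k-1}$, the identification with the statement is already complete. Your observation that the terms with $p+|r|_1>k$ have zero residue is true, but it is not needed to match the formula; the statement keeps those terms in the sum (where they simply vanish).
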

\begin{proof}
By Lemma \ref{2dev} $(ii)$, $|D_A|^{-s} \sim |D|^{-s} + \sum_{p=1}^k
K_p(Y,s)
|D|^{-s} \mod OP^{-(k+1)-\Re(s)}$, where the convention
$\sum_{\emptyset}=0$ is used.
Thus, we get for $s$ in a neighborhood of $n-k$,
$$
|D_A|^{-s}-|D|^{-s} - \sum_{p=1}^k K_p(Y,s) |D|^{-s} \in
OP^{-(k+1)-\Re(s)}\subseteq
\L^1(\H)
$$
which gives
\begin{equation}
    \label{res-n-k-interm}
\underset{s=n-k}{\Res} \, \zeta_{D_A}(s)= \underset{s=n-k}{\Res}
\,\zeta_{D}(s) + \sum_{p=1}^k \underset{s=n-k}{\Res} \,\Tr
\big(K_p(Y,s) |D|^{-s}\big).
\end{equation}
Let us fix $1\leq p\leq k$ and $N\in \N$. By (\ref{one-par}) we get
\begin{align}
    \label{K_p}
K_p(Y,s)\sim (-\tfrac s2)^p \int_{0\leq t_1\leq \cdots t_p \leq 1}
\sum_{r_1,\cdots,r_p =0}^N g(-st_1,r_1)&\cdots g(-st_p,r_p)
\nonumber\\
&\eps^{r_1}(Y)\cdots \eps^{r_p}(Y)\, dt \, \mod OP^{-N-p-1}.
\end{align}

If we now take $N=k-p$, we get for $s$ in a neighborhood of $n-k$
$$
K_p(Y,s)|D|^{-s} - \sum_{r_1,\cdots,r_p
=0}^{k-p}h(s,r,p)\,\eps^{r_1}(Y)\cdots
\eps^{r_p}(Y)|D|^{-s} \in OP^{-k-1-\Re(s)} \subseteq \L^1(\H)
$$
so (\ref{res-n-k-interm}) gives the result.
\end{proof}

Our operators $|D_A|^k$ are pseudodifferential operators:
\begin{lemma}
For any $k\in \Z$, $\vert D_{A} \vert^k \in \Psi^k(\A)$.
\end{lemma}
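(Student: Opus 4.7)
The plan is to reduce $|D_A|^k$ to a finite sum of known pseudodifferential operators via the asymptotic expansion of Lemma \ref{2dev}(ii). Specializing that lemma to $s=-k$, for every $M\in\N$ one obtains
\begin{align*}
|D_A|^k \,=\, |D|^k + \sum_{p=1}^{M} K_p(Y,-k)\,|D|^k + R_M, \qquad R_M\in OP^{-M-1+k}.
\end{align*}
Each factor $K_p(Y,-k)$ can then be expanded further via the $\sigma_z$-expansion (\ref{one-par}), exactly as in (\ref{K_p}) inside the proof of Lemma \ref{Res-zeta-n-k}, producing a finite linear combination of operators $h(-k,r,p)\,\eps^{r_1}(Y)\cdots\eps^{r_p}(Y)\,|D|^k$ modulo $OP^{-L-p-1+k}$ for any $L\in\N$. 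Given a target order $N\in\N$, choosing $M$ and $L$ large enough (taking $M,L\geq N+|k|$ suffices) forces every remainder into $OP^{-N}$.

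By Corollary \ref{eps-pdo}, each factor $\eps^{r_1}(Y)\cdots\eps^{r_p}(Y)$ belongs to $\Psi_1(\A)\subseteq\Psi(\A)$, and $|D|^k\in\Psi(\A)$ is noted just after Definition \ref{defpseudo}. Since $\Psi(\A)$ is an algebra (Lemma \ref{pdoalg}), the principal part $T_N$ of the expansion, being a finite sum of products of such operators, lies in $\Psi(\A)$. For the order: since $Y\in OP^{-1}$ and $\nabla(D^{-2})=0$, an induction using Lemma \ref{propOP}(v) gives $\eps^{r}(Y)=\nabla^{r}(Y)D^{-2r}\in OP^{-1-r}$, so every summand sits in $OP^{k-p-|r|_1}\subseteq OP^k$; together with $|D|^k\in OP^k$, this yields $T_N\in\Psi(\A)\cap OP^k=\Psi^k(\A)$.

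To conclude, I apply Definition \ref{defpseudo} to $T_N$: for the prescribed $N$ there exist $q\in\N_0$, $P\in\DD(\A)$ and $R'\in OP^{-N}$ with $T_N=PD^{-2q}+R'$ and $PD^{-2q}\in OP^k$. Writing $|D_A|^k = T_N + R_M$ with $R_M\in OP^{-N}$ from the choice of $M,L$ above, one gets
\begin{align*}
|D_A|^k \,=\, PD^{-2q} + (R'+R_M), \qquad R'+R_M\in OP^{-N},
\end{align*}
which is exactly the representation demanded by Definition \ref{defpseudo} with $d=k$. As $N$ is arbitrary, $|D_A|^k\in\Psi^k(\A)$.

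The only real difficulty is bookkeeping: one must coordinate the three nested asymptotic expansions (of $|D_A|^{-s}$, of $\sigma_z(Y)$, and of the pseudodifferential representation of $T_N$) so that every remainder collapses into a common $OP^{-N}$. All the substantial mathematical content is already in Lemmas \ref{2dev}, \ref{propOP}, \ref{pdoalg}, and Corollary \ref{eps-pdo}; given these, the statement follows almost formally, uniformly in the sign of $k$ (the case $k=0$ is trivial since $K_p(Y,0)=0$ and $|D_A|^0=\Id$).
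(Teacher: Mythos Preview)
Your proof is correct and follows exactly the same approach as the paper: use the expansion (\ref{expansion}) at $s=-k$, expand each $K_p(Y,-k)$ via (\ref{K_p}) into combinations of $\eps^{r_1}(Y)\cdots\eps^{r_p}(Y)$, and invoke Corollary~\ref{eps-pdo} together with $|D|^k\in\Psi(\A)$ and the algebra property. The paper compresses this into two sentences; you have simply spelled out the bookkeeping (coordinating the remainders and verifying the order via $\eps^r(Y)\in OP^{-1-r}$), which is entirely appropriate.
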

\begin{proof}
Using (\ref{K_p}), we see that $K_p(Y,s)$ is a pseudodifferential
operator in $OP^{-p}$, so (\ref{expansion}) proves that $|D_A|^k$
is a pseudodifferential operator in $OP^k$.
\end{proof}

The following result is quite important since it shows that one
can use $\ncint$ for $D$ or $D_{A}$:
\begin{prop}
\label{ncintfluctuated}
If the spectral triple is simple,
$\underset{s=0}{\Res} \, \Tr \big(P |D_A|^{-s}\big) = \ncint P$
for any pseudodifferential operator $P$. In particular, for any $k\in
\N_0$
$$
\ncint |D_A|^{-(n-k)}=\underset{s=n-k}{\Res} \,\zeta_{D_A}(s) .
$$
\end{prop}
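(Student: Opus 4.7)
The plan is to apply the expansion of $|D_A|^{-s}$ from Lemma \ref{2dev}(ii) modulo arbitrarily low order, and then to use simplicity of the dimension spectrum to eliminate all correction terms. For $P \in \Psi(\A) \cap OP^d$ and any $N \in \N$, I would write
\begin{equation*}
\Tr\bigl(P |D_A|^{-s}\bigr) = \Tr\bigl(P |D|^{-s}\bigr) + \sum_{p=1}^{N} \Tr\bigl(P\, K_p(Y,s)\, |D|^{-s}\bigr) + \Tr\bigl(P R_N(s)\bigr),
\end{equation*}
with $R_N(s) \in OP^{-N-1-\Re(s)}$. Choosing $N > d+n$ makes $P R_N(s)$ trace-class and holomorphic in a neighborhood of $s=0$, so it contributes no residue there. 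The first term contributes exactly $\ncint P$ at $s=0$ by the definition of the noncommutative integral.

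For the intermediate terms, I would substitute the expansion (\ref{K_p}) of $K_p(Y,s)$ modulo $OP^{-M-p-1}$ with $M$ large enough that the remainder is again trace-class and holomorphic near $s=0$, leaving the finite sum
\begin{equation*}
\sum_{r_1,\ldots,r_p=0}^{M} h(s,r,p)\, \Tr\bigl(P\,\eps^{r_1}(Y)\cdots\eps^{r_p}(Y)\,|D|^{-s}\bigr).
\end{equation*}
By Corollary \ref{eps-pdo} and Lemma \ref{pdoalg}, each product $Q_r := P\,\eps^{r_1}(Y)\cdots \eps^{r_p}(Y)$ belongs to $\Psi(\A)$; if $Q_r \in OP^e$ then $Q_r |D|^{-e} \in \Psi(\A)\cap OP^0$ and the identity $\Tr(Q_r |D|^{-s}) = \Tr\bigl((Q_r|D|^{-e})|D|^{-(s-e)}\bigr)$ transfers the simplicity hypothesis: the pole at $s=0$ (if any) is simple. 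Since $h(s,r,p) = (-s/2)^p \int g(-st_1,r_1)\cdots g(-st_p,r_p)\,dt$ has a zero of order $p\geq 1$ at $s=0$ (the remaining integral being analytic in $s$), it kills this simple pole, so every term in the sum has zero residue at $s=0$.

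Combining these observations yields $\underset{s=0}{\Res}\Tr(P|D_A|^{-s}) = \ncint P$ for every pseudodifferential $P$. For the second statement, apply this with $P = |D_A|^{-(n-k)}$, which is pseudodifferential by the preceding lemma, and use $|D_A|^{-(n-k)}|D_A|^{-s} = |D_A|^{-(s+n-k)}$ to conclude
\begin{equation*}
\ncint |D_A|^{-(n-k)} = \underset{s=0}{\Res}\,\zeta_{D_A}(s+n-k) = \underset{s=n-k}{\Res}\,\zeta_{D_A}(s).
\end{equation*}
The main technical delicacy is transferring the simplicity hypothesis from $OP^0$ to arbitrary pseudodifferential operators, handled by the rescaling $Q_r \mapsto Q_r |D|^{-e}$ above; this is precisely why it is essential that $P\,\eps^{r_1}(Y)\cdots \eps^{r_p}(Y)$ actually lie in $\Psi(\A)$ and not merely in some $OP^e$, hence the need for Corollary \ref{eps-pdo}.
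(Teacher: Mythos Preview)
Your proof is correct and follows essentially the same approach as the paper: expand $P|D_A|^{-s}$ using Lemma~\ref{2dev}(ii) and then (\ref{K_p}), observe that the remainders are trace-class near $s=0$, and kill each correction term by pairing the simple pole of $\Tr(P\,\eps^{r_1}(Y)\cdots\eps^{r_p}(Y)\,|D|^{-s})$ with the zero of $h(s,r,p)$ at $s=0$. You are slightly more explicit than the paper in justifying why simplicity of the dimension spectrum (formulated for $OP^0$) transfers to $Q_r\in OP^e$ via the shift $\Tr(Q_r|D|^{-s})=\zeta_D^{Q_r|D|^{-e}}(s-e)$; the paper invokes this implicitly with ``by hypothesis''.
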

\begin{proof}
Suppose $P\in OP^{k}$ with $k\in \Z$ and let us fix $p\geq 1$.
With (\ref{K_p}), we see that for any $N\in \N$,
$$
PK_p(Y,s)|D|^{-s}\sim \sum_{r_1,\cdots,r_p =0}^N h(s,r,p) \,
P\eps^{r_1}(Y)\cdots \eps^{r_p}(Y)|D|^{-s} \mod OP^{-N-p-1+k-\Re(s)}.
$$
Thus if we take $N=n-p+k$, we get
$$
 \underset{s=0}{\Res} \,\Tr \big(P K_p (Y,s) |D|^{-s}\big) =
 \sum_{r_1,\cdots,r_p =0}^{n-p+k} \underset{s=0}{\Res} \,\,
h(s,r,p) \, \Tr \big(P\eps^{r_1}(Y)\cdots \eps^{r_p}(Y) |D|^{-s}\big).
$$
Since $s=0$ is a zero of the analytic function $s\mapsto h(s,r,p)$
and $s\mapsto \Tr
P\eps^{r_1}(Y)\cdots \eps^{r_p}(Y)|D|^{-s}$ has only simple poles
by hypothesis, we see
that $\underset{s=0}{\Res} \, h(s,r,p) \, \Tr
\big(P\eps^{r_1}(Y)\cdots \eps^{r_p}(Y)
|D|^{-s}\big)=0$ and
\begin{equation}
\label{res0K_p}
\underset{s=0}{\Res} \, \Tr \big(P K_p (Y,s) |D|^{-s}\big)=0.
\end{equation}
Using (\ref{expansion}), $ P|D_A|^{-s} \sim P |D|^{-s} +
\sum_{p=1}^{k+n} PK_p(Y,s) |D|^{-s} \mod OP^{-n-1-\Re(s)} $ and thus,
\begin{equation}\label{res0PD_A}
\underset{s=0}{\Res} \, \Tr (P|D_A|^{-s}) =\ncint P +
\sum_{p=1}^{k+n} \,
\underset{s=0}{\Res} \, \Tr \big(PK_p(Y,s) |D|^{-s}\big).
\end{equation}
The result now follows from (\ref{res0K_p}) and (\ref{res0PD_A}).
To get the last equality,
one uses the pseudodifferential operator $|D_A|^{-(n-k)}$.
\end{proof}

\begin{prop}
\label{invariance1} If the spectral triple is simple, then
\begin{align}
\ncint {|D_{A}|}^{-n}&=\ncint |D|^{-n}.
\end{align}
\end{prop}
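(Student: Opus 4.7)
The plan is to read off the result as the $k=0$ case of Lemma \ref{Res-zeta-n-k} combined with the last assertion of Proposition \ref{ncintfluctuated}. Both ingredients are already established, so the proof should be essentially a one-line deduction; the only thing to verify is that no residue term survives at $k=0$.

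First I would apply Lemma \ref{Res-zeta-n-k} with $k=0$. The formula there reads
\[
\underset{s=n-k}{\Res}\,\zeta_{D_A}(s)=\underset{s=n-k}{\Res}\,\zeta_{D}(s)+\sum_{p=1}^{k}\sum_{r_1,\dots,r_p=0}^{k-p}\underset{s=n-k}{\Res}\,h(s,r,p)\,\Tr\bigl(\eps^{r_1}(Y)\cdots\eps^{r_p}(Y)|D|^{-s}\bigr).
\]
For $k=0$ the outer $p$-sum is empty (it is $\sum_{p=1}^{0}$, which is zero by the convention $\sum_{\emptyset}=0$ already used in the proof of Lemma \ref{2dev}), so we obtain directly
\[
\underset{s=n}{\Res}\,\zeta_{D_A}(s)=\underset{s=n}{\Res}\,\zeta_{D}(s).
\]

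Next I would invoke the last statement of Proposition \ref{ncintfluctuated}: for every $k\in\N_0$,
\[
\ncint |D_A|^{-(n-k)}=\underset{s=n-k}{\Res}\,\zeta_{D_A}(s),
\]
and in particular, taking $k=0$, $\ncint |D_A|^{-n}=\underset{s=n}{\Res}\,\zeta_{D_A}(s)$. The same identity applied to the unperturbed Dirac operator (which is the special case $A=0$, or equivalently a direct application of the definition of $\ncint$) gives $\ncint |D|^{-n}=\underset{s=n}{\Res}\,\zeta_{D}(s)$. Combining these two equalities with the residue identity above yields
\[
\ncint |D_A|^{-n}=\underset{s=n}{\Res}\,\zeta_{D_A}(s)=\underset{s=n}{\Res}\,\zeta_{D}(s)=\ncint |D|^{-n},
\]
which is the claim.

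There is no substantive obstacle here: the real work was done in Lemma \ref{2dev} (expansion of $|D_A|^{-s}$ modulo $OP^{-N-1-\Re(s)}$), in Lemma \ref{Res-zeta-n-k}, and in Proposition \ref{ncintfluctuated}; the case $k=0$ is precisely where the perturbative corrections start from the empty sum, reflecting the fact that the leading heat-kernel coefficient only sees the principal symbol of $\DD^2$, which is unchanged by the bounded perturbation $\wt A\DD+\DD\wt A+\wt A^2\in OP^{1}$. The only point worth flagging is the convention on the empty sum, which must be the same one adopted in Lemma \ref{2dev}$(ii)$ so that the residue formula is valid at $k=0$.
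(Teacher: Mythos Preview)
Your proof is correct and is exactly the paper's own argument: the paper's proof reads ``Lemma \ref{Res-zeta-n-k} and previous proposition for $k=0$'', which is precisely the deduction you wrote out. Your observation that the $p$-sum in Lemma \ref{Res-zeta-n-k} is empty at $k=0$ is the only thing to check, and you handled it correctly.
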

\begin{proof} Lemma \ref{Res-zeta-n-k} and previous proposition for
$k=0$.
\end{proof}

\begin{lemma}
    \label{residus-particuliers}
If the spectral triple is simple,
\begin{align*}
\hspace{-2.2cm} (i) & \quad \ncint |D_A|^{-(n-1)}= \ncint |D|^{-(n-1)}
-(\tfrac{n-1}{2})\ncint X|D|^{-n-1}.\\
\hspace{-2.2cm} (ii)  & \quad \ncint |D_A|^{-(n-2)}= \ncint
|D|^{-(n-2)}+\tfrac{n-2}{2}\big(-\ncint X|D|^{-n} + \tfrac{n}{4}
\ncint X^2
|D|^{-2-n} \big).
\end{align*}
\end{lemma}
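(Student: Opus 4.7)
The plan is to combine Proposition \ref{ncintfluctuated}, which identifies $\ncint |D_A|^{-(n-k)}$ with $\underset{s=n-k}{\Res}\, \zeta_{D_A}(s)$, with the residue expansion of Lemma \ref{Res-zeta-n-k} and the asymptotic development of $Y$ given by Lemma \ref{2dev}(i). A shift of variable identifies $\underset{s=n-k}{\Res}\,\Tr\big(T|D|^{-s}\big)$ with $\ncint T|D|^{-(n-k)}$. The crucial finiteness principle is that $\ncint T=0$ whenever $T\in OP^{-n-1}$, since $\Tr(T|D|^{-s})$ is then analytic at $s=0$. So in every noncommutative integral below I keep only the terms of the expansion of $Y$ whose product with $|D|^{-(n-k)}$ lies in $OP^{-n}$.

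For (i), $k=1$ forces $(p,r_1)=(1,0)$ in Lemma \ref{Res-zeta-n-k}, and $h(s,0,1)=-s/2$ gives
$$
\ncint |D_A|^{-(n-1)} - \ncint |D|^{-(n-1)} = -\tfrac{n-1}{2}\, \ncint Y|D|^{-(n-1)}.
$$
Writing $Y=XD^{-2}+R$ with $R\in OP^{-2}$ (Lemma \ref{2dev}(i)), the remainder $R|D|^{-(n-1)}\in OP^{-n-1}$ drops out, and $D^{-2}=|D|^{-2}$ (since $D$ is selfadjoint invertible) yields $\ncint Y|D|^{-(n-1)}=\ncint X|D|^{-(n+1)}$, establishing (i).

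For (ii), three triples contribute to Lemma \ref{Res-zeta-n-k}: $(p,r)=(1,0),\,(1,1),\,(2,(0,0))$, with $h$-values $-s/2,\,s^2/8,\,s^2/8$, producing weights $-(n-2)/2,\,(n-2)^2/8,\,(n-2)^2/8$ at $s=n-2$. The $(1,1)$-contribution reduces to $\ncint\eps(Y)|D|^{-(n-2)}=\ncint[\DD^2,Y]|D|^{-n}$, which vanishes: by the trace property (Proposition \ref{tracenc}), $D^2=|D|^2$ commutes with $|D|^{-n}$, and the correction $\DD^2-D^2=-P_0$ is smoothing so its commutator contribution is in $OP^{-\infty}$. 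For $(1,0)$, truncating Lemma \ref{2dev}(i) as $Y\equiv XD^{-2}-\tfrac12\nabla(X)D^{-4}-\tfrac12 X^2D^{-4}\bmod OP^{-3}$, the $\nabla(X)$-term contributes $-\tfrac12\ncint[\DD^2,X]|D|^{-n-2}=0$ by the same trace argument, whence $\ncint Y|D|^{-(n-2)}=\ncint X|D|^{-n}-\tfrac12\ncint X^2|D|^{-n-2}$. For $(2,(0,0))$, the $\sigma$-expansion \eqref{one-par} gives $(XD^{-2})^2\equiv X^2D^{-4}\bmod OP^{-3}$, so $\ncint Y^2|D|^{-(n-2)}=\ncint X^2|D|^{-n-2}$. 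Assembling, the coefficient of $\ncint X|D|^{-n}$ is $-(n-2)/2$ and that of $\ncint X^2|D|^{-n-2}$ is $\tfrac{n-2}{4}+\tfrac{(n-2)^2}{8}=\tfrac{n(n-2)}{8}$, which matches (ii).

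The hard part is the bookkeeping: one must expand $Y$ and $Y^2$ just far enough to catch all contributions of order above $-n-1$ after multiplication by $|D|^{-(n-k)}$, and then use the trace property (which is valid precisely by simplicity of the dimension spectrum) to annihilate every commutator-type contribution of the form $\ncint[\DD^2,\cdot]|D|^{-n-k}$. This is what eliminates the spurious $\nabla(X)$-terms and leaves only the clean polynomials in $X$ on the right-hand side.
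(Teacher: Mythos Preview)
Your proof is correct and follows essentially the same approach as the paper: you invoke Proposition \ref{ncintfluctuated} and Lemma \ref{Res-zeta-n-k}, expand $Y$ via Lemma \ref{2dev}(i) to the required order, and use the trace property of $\ncint$ (valid by simplicity) to kill the $\nabla$-type terms. The only cosmetic difference is in the treatment of the $(p,r)=(1,1)$ term: you argue directly that $\ncint[\DD^2,Y]|D|^{-n}=0$ via the trace, whereas the paper first replaces $\eps(Y)$ by its leading term $\nabla(X)D^{-4}$ and then applies the trace property to $\ncint\nabla(X)|D|^{-n-2}$; both are equally valid.
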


\begin{proof}
$(i)$ By (\ref{expansion}),
$$
\underset{s=n-1}{\Res} \, \zeta_{D_A}(s) -\zeta_{D}(s)=
\underset{s=n-1}{\Res} \,(-s/2)
\Tr \big(Y |D|^{-s}\big) = -\tfrac{n-1}{2} \, \underset{s=0}{\Res} \,
\Tr
\big(Y|D|^{-(n-1)}|D|^{-s} \big)
$$
where for the last equality we use the simple dimension spectrum
hypothesis. Lemma \ref{2dev} $(i)$ yields
$Y\sim XD^{-2} \mod OP^{-2}$ and $Y|D|^{-(n-1)}\sim
X|D|^{-n-1} \mod OP^{-n-1}\subseteq \L^1(\H)$. Thus,
$$
\underset{s=0}{\Res} \, \Tr \big( Y|D|^{-(n-1)}|D|^{-s}\big) =
\underset{s=0}{\Res} \,
\Tr \big(X|D|^{-n-1} |D|^{-s}\big) = \ncint  X |D|^{-n-1}.
$$
$(ii)$ Lemma \ref{Res-zeta-n-k} $(ii)$ gives
$$
\underset{s=n-2}{\Res} \, \zeta_{D_A}(s) =
\underset{s=n-2}{\Res} \, \zeta_{D}(s) +
\underset{s=n-2}{\Res} \, \sum_{r=0}^1 h(s,r,1) \,
\Tr \big(\eps^r(Y)|D|^{-s}\big) +
h(s,0,2) \,\Tr \big(Y^2 |D|^{-s}\big).
$$
We have $h(s,0,1)=-\tfrac s2$, $h(s,1,1)=
\half(\tfrac s2)^2$ and $h(s,0,2)= \half (\tfrac
s2)^2$. Using again Lemma \ref{2dev} $(i)$,
$$
Y\sim XD^{-2}-\half \nabla(X)D^{-4} -\half X^2 D^{-4} \mod OP^{-3}.
$$
Thus,
$$
\underset{s=n-2}{\Res} \, \Tr \big(Y|D|^{-s}\big) =
\ncint X|D|^{-n} -\half \ncint
(\nabla(X)+X^2)|D|^{-2-n}.
$$
Moreover, using $\ncint \nabla(X)|D|^{-k}=0$ for any $k\geq 0$
since $\ncint$ is a trace,
$$
\underset{s=n-2}{\Res} \, \Tr \big(\eps(Y)|D|^{-s}\big) =
\underset{s=n-2}{\Res} \, \Tr
\big(\nabla(X)D^{-4}|D|^{-s}\big) = \ncint \nabla(X)|D|^{-2-n}=0.
$$
Similarly, since $Y\sim XD^{-2}$ mod $OP^{-2}$ and
$Y^2\sim X^2D^{-4} \mod OP^{-3}$, we get
$$
\underset{s=n-2}{\Res} \, \Tr \big(Y^2 |D|^{-s}\big) =
\underset{s=n-2}{\Res} \, \Tr
\big(X^2D^{-4}|D|^{-s}\big) = \ncint X^2 |D|^{-2-n}.
$$
Thus,
\begin{align*}
\underset{s=n-2}{\Res} \, \zeta_{D_A}(s) =
\underset{s=n-2}{\Res} \,\zeta_{D}(s) +
&(-\tfrac {n-2}{2})(\ncint X|D|^{-n}
-\half \ncint (\nabla(X)+X^2)|D|^{-2-n})\\
& \quad +\half(\tfrac {n-2}{2})^2
\ncint \nabla(X)|D|^{-2-n}+\half(\tfrac {n-2}{2})^2
\ncint X^2 |D|^{-2-n}.
\end{align*}
Finally,
$$
\underset{s=n-2}{\Res} \, \zeta_{D_A}(s) =
\underset{s=n-2}{\Res} \,\zeta_{D}(s) +
(-\tfrac {n-2}{2}) \big(\ncint X|D|^{-n} -
\half \ncint X^2|D|^{-2-n}\big)+\half(\tfrac
{n-2}{2})^2 \ncint X^2 |D|^{-2-n}
$$
and the result follows from Proposition \ref{ncintfluctuated}.
\end{proof}

\begin{corollary}
    \label{res-n-2-A}
If the spectral triple is simple and satisfies $\ncint |D|^{-(n-2)}=
\ncint \wt A \DD |D|^{-n} = \ncint  \DD \wt A |D|^{-n}=0$, then
$$
 \ncint |D_A|^{-(n-2)} = \tfrac{n(n-2)}{4}
 \big(\ncint \wt A \DD \wt A \DD
|D|^{-n-2}+\tfrac{n-2}{n}\ncint \wt A ^2|D|^{-n}\big).
$$
\end{corollary}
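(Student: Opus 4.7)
The plan is to apply Lemma~\ref{residus-particuliers}(ii) and systematically simplify its right-hand side using the three hypotheses together with the trace property of $\ncint$ established in Proposition~\ref{tracenc}.

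First I would treat the scalar pieces. The hypothesis $\ncint |D|^{-(n-2)}=0$ eliminates the first summand of Lemma~\ref{residus-particuliers}(ii). Writing $X=\wt A\DD+\DD\wt A+\wt A^{2}$, the remaining two hypotheses give at once $\ncint X|D|^{-n}=\ncint \wt A^{2}|D|^{-n}$, so only the $X^{2}$ term still requires work. Expanding $X^{2}$ into its nine products and sorting by order, the five summands $\wt A\DD\wt A^{2}$, $\wt A^{2}\DD\wt A$, $\wt A^{3}\DD$, $\DD\wt A^{3}$ and $\wt A^{4}$ lie in $OP^{1}\cup OP^{0}$, so after multiplication by $|D|^{-n-2}$ they belong to $OP^{-n-1}\subseteq \L^{1}(\H)$; for such operators $s\mapsto \Tr(T|D|^{-s})$ is holomorphic at $s=0$ and therefore $\ncint T=0$. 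The four remaining summands contribute $\ncint \wt A\DD\wt A\DD|D|^{-n-2}$, $\ncint \DD\wt A\DD\wt A|D|^{-n-2}$, $\ncint \wt A\DD^{2}\wt A|D|^{-n-2}$ and $\ncint \DD\wt A^{2}\DD|D|^{-n-2}$.

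The key observation I would use to handle those four terms is $[|D|,\DD]=0$: indeed $|D|=|\DD|+P_{0}$, $|\DD|$ commutes with $\DD$ by functional calculus, and $\DD P_{0}=P_{0}\DD=0$ since $P_{0}$ projects onto $\Ker\DD$. Hence $|D|^{-n-2}$ commutes with $\DD$, and in addition $\DD^{2}=|D|^{2}-P_{0}$ with $P_{0}\in OP^{-\infty}$. Cycling $\DD$ inside the third and fourth surviving integrals and discarding the smoothing piece gives $\ncint\wt A\DD^{2}\wt A|D|^{-n-2}=\ncint\DD\wt A^{2}\DD|D|^{-n-2}=\ncint\wt A^{2}|D|^{-n}$, while cycling $|D|^{-n-2}$ across the last $\wt A$ in $\ncint\DD\wt A\DD\wt A|D|^{-n-2}$ produces only a commutator $[\wt A,|D|^{-n-2}]\in OP^{-n-3}$, whose contribution lies in $OP^{-n-1}$ and therefore vanishes, so $\ncint\DD\wt A\DD\wt A|D|^{-n-2}=\ncint\wt A\DD\wt A\DD|D|^{-n-2}$. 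Summing these four contributions yields
$$
\ncint X^{2}|D|^{-n-2}=2\,\ncint\wt A\DD\wt A\DD|D|^{-n-2}+2\,\ncint\wt A^{2}|D|^{-n},
$$
and plugging this together with $\ncint X|D|^{-n}=\ncint\wt A^{2}|D|^{-n}$ back into Lemma~\ref{residus-particuliers}(ii), with the prefactors $\tfrac{n-2}{2}$ and $\tfrac{n(n-2)}{8}$, produces the asserted identity after elementary algebra.

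The only real obstacle I anticipate is the careful bookkeeping of the commutators with $|D|^{-n-2}$ generated by cyclicity, and checking each time that the leftover lies in $OP^{-n-1}$ or lower so that $\ncint$ annihilates it; apart from this the argument is purely algebraic and rests entirely on the trace property of $\ncint$ and the smoothing nature of $P_{0}$.
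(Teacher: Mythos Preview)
Your proof is correct and follows essentially the same route as the paper: apply Lemma~\ref{residus-particuliers}(ii), discard the $OP^{\leq 1}$ pieces of $X^2$, and reduce the four surviving $OP^2$ terms using the trace property together with $[\DD,|D|^{-n-2}]=0$. The only cosmetic difference is that the paper packages the commutator bookkeeping via $\nabla(\wt A)=[\DD^2,\wt A]\in OP^1$, whereas you use $[\wt A,|D|^{-n-2}]\in OP^{-n-3}$; these are equivalent observations leading to the same reductions.
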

\begin{proof}
By previous lemma,
$$
\underset{s=n-2}{\Res} \, \zeta_{D_A}(s) =
\tfrac{n-2}{2}\big( -\ncint \wt A^2 |D|^{-n}
+\tfrac{n}{4}\ncint (\,\wt A\DD \wt A \DD+ \DD \wt A \DD \wt A+
\wt A \DD^2 \wt A + \DD \wt A^2 \DD \,) |D|^{-n-2} \big).
$$
Since $\nabla(\wt A) \in OP^1$, the trace property of $\ncint$ yields
the result.
\end{proof}

\section{Noncommutative integrals and tadpoles} \label{tadsection}

Let $(\A,\DD,\H,J)$ be a real regular spectral triple of dimension $d$.
Recall that a one-form $A$ is a finite sum of operators like $a_1[\DD,a_2]$
where $a_i \in \A$. The set of one-forms is denoted by $\Omega_\DD^1(\A)$.

\begin{lemma}
  \label{adjoint}
Let $(\A,\DD,\H)$ be a spectral triple and $X \in \Psi(\A)$. Then 
\begin{align*}
\ncint X^*=\overline{  \ncint X}.
\end{align*}
If the spectral triple is real, then, for $X \in \Psi(\A)$, $JXJ^{-1}
\in \Psi(\A)$ and
$$
\ncint JXJ^{-1}=\ncint X^*=\overline{  \ncint X}.
$$
\end{lemma}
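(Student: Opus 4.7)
The plan is to split the lemma in three moves. For the identity $\ncint X^\ast = \overline{\ncint X}$, I would start in the right half-plane where $X|D|^{-s}$ is trace class (guaranteed for $\Re s$ large relative to the order of $X$). There, self-adjointness of $|D|$ gives $(X|D|^{-s})^\ast = |D|^{-\bar s}X^\ast$, and combining $\Tr(T^\ast)=\overline{\Tr T}$ with cyclicity produces the functional identity
\[
\zeta_D^{X^\ast}(s) \,=\, \overline{\zeta_D^X(\bar s)}.
\]
Both sides are meromorphic on $\C$ (the left hand side because $\Psi(\A)$ is $\ast$-stable, so $X^\ast\in\Psi(\A)$), so the identity persists by analytic continuation, and comparing the coefficients of $s^{-1}$ in the Laurent expansion at $0$ yields the first assertion.

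For the remainder I would first check that $JXJ^{-1}\in \Psi(\A)$. Conjugation by $J$ sends $\A$ to $J\A J^{-1}$ and back (via $J^2 = \pm 1$), sends $\DD$ to $\eps \DD$, and commutes with $|\DD|$, hence preserves the polynomial algebra $\DD(\A)$. Since $J$ preserves $\Ker \DD$ (Lemma \ref{compres}) it also commutes with $|D|$, so $J\,OP^\a\,J^{-1}\subseteq OP^\a$. Applying $J \cdot J^{-1}$ to the defining decomposition $X = PD^{-2p}+R$ modulo $OP^{-N}$ from Definition \ref{defpseudo} yields an analogous decomposition for $JXJ^{-1}$, whence $JXJ^{-1}\in\Psi(\A)$ with the same order as $X$.

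The conceptual heart of the argument is the relation $\ncint JXJ^{-1} = \overline{\ncint X}$. It rests on two inputs. First, because $J$ is antilinear and commutes with $|D|$, spectral calculus gives $J\,|D|^{-s}\,J^{-1}=|D|^{-\bar s}$, hence
\[
JXJ^{-1}\,|D|^{-s} \,=\, J\,X\,|D|^{-\bar s}\,J^{-1}.
\]
Second, for every trace-class $T$ the antiunitarity of $J$ implies $\Tr(JTJ^{-1})=\overline{\Tr T}$: if $\{f_i\}$ is an orthonormal basis then so is $\{Jf_i\}$, and antilinearity gives $\langle Jf_i,\,JTf_i\rangle=\overline{\langle f_i,Tf_i\rangle}$, which sums to the claim. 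Applied with $T=X|D|^{-\bar s}$ these two facts yield $\zeta_D^{JXJ^{-1}}(s)=\overline{\zeta_D^X(\bar s)}$, whose coefficient of $s^{-1}$ at $0$ is $\overline{\ncint X}$, closing the lemma together with the first part. No step is a genuine obstacle: the only subtle point is tracking how antilinearity of $J$ interacts with the complex variable $s$, which is precisely what turns $|D|^{-s}$ into $|D|^{-\bar s}$ and produces the complex conjugate of the residue rather than the residue itself.
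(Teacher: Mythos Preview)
Your proof is correct and follows essentially the same route as the paper: both arguments rest on the identities $\Tr(T^\ast)=\overline{\Tr T}$, $\Tr(JTJ^{-1})=\overline{\Tr T}$, and the commutation of $J$ with $|\DD|$ (hence $J|D|^{-s}J^{-1}=|D|^{-\bar s}$), combined to give $\zeta_D^{X^\ast}(s)=\overline{\zeta_D^X(\bar s)}=\zeta_D^{JXJ^{-1}}(s)$ in a half-plane and then at the residue by analytic continuation. You supply more detail than the paper on why $JXJ^{-1}\in\Psi(\A)$ and why $\Tr(JTJ^{-1})=\overline{\Tr T}$, but the core argument is the same.
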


\begin{proof}
The first result follows from (for $s$ large enough, so the operators
are traceable)
\begin{align*}
\Tr(X^*\vert \DD\vert^{-s})=\Tr \big((\vert
\DD\vert^{-\bar{s}})X)^*\big)=\overline{ \Tr(\vert \DD \vert^{-\bar{s}}
X)}=\overline{\Tr(X\vert \DD \vert^{-\bar{s}})}.
\end{align*}

The second result is due to the anti-linearity of $J$,
$\Tr(JYJ^{-1})=\overline{\Tr(Y)}$, and $J\vert \DD \vert=\vert \DD \vert
J$, so 
\begin{align*}
\Tr(X \vert \DD \vert^{-s})=\overline{\Tr(JX \vert \DD
\vert^{-s}J^{-1})}=\overline{\Tr(JXJ^{-1}\vert \DD \vert^{-\bar{s}})}.
\tag*{\qed}
\end{align*}
\hideqed
\end{proof}

\begin{corollary}
\label{reel}
For any one-form $A=A^*$, and for $k,\,l \in \N$,
$$
\ncint A^l \,\DD^{-k} \in \R,\quad \ncint \big(A\DD^{-1}\big)^k \in \R,
\quad \ncint A^l \,\vert \DD\vert^{-k} \in \R,\quad  \ncint \chi A^l\,\vert \DD \vert ^{-k} \in \R, 
\quad \ncint A^l\,\DD \, \vert \DD \vert^{-k} \in \R.
$$
\end{corollary}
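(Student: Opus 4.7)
The plan is to combine Lemma \ref{adjoint} with the trace property of the noncommutative integral from Proposition \ref{tracenc} (which carries an implicit simple dimension spectrum hypothesis). For each operator $X$ appearing in the statement, I would compute $X^{*}$ and then show via cyclicity that $\ncint X^{*} = \ncint X$; Lemma \ref{adjoint} then yields $\overline{\ncint X} = \ncint X^{*} = \ncint X$, which is exactly the reality assertion. So the proof reduces to five short computations of the form "take the adjoint and cycle".

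Carrying this out case by case: for $X = A^{l}\DD^{-k}$, self-adjointness of $A$ gives $X^{*} = \DD^{-k}A^{l}$, and one cyclic permutation recovers $\ncint A^{l}\DD^{-k}$; for $X = (A\DD^{-1})^{k}$, one has $X^{*} = (\DD^{-1}A)^{k}$, and moving the leading $\DD^{-1}$ to the far right yields $(A\DD^{-1})^{k}$; the case $X = A^{l}|\DD|^{-k}$ is identical to the first with $|\DD|$ replacing $\DD$. For $X = \chi A^{l}|\DD|^{-k}$ I would use that $\chi = \chi^{*}$ and $[\chi, |\DD|] = 0$, so $X^{*} = |\DD|^{-k}A^{l}\chi$, after which two cyclic moves together with the commutation of $\chi$ with $|\DD|^{-k}$ return the integrand to $\chi A^{l}|\DD|^{-k}$. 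Finally, for $X = A^{l}\DD|\DD|^{-k}$ one has $X^{*} = |\DD|^{-k}\DD A^{l} = \DD|\DD|^{-k}A^{l}$, since $\DD$ commutes with $|\DD|$, and a single cyclic move closes the argument.

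The only delicate point, and really the main (but very minor) obstacle, is bookkeeping rather than substance: one must check that each $X$ is genuinely a pseudodifferential operator in $\Psi(\A)$ so that Proposition \ref{tracenc} applies, and one should read $\DD^{-k}$ and $|\DD|^{-k}$ as $D^{-k}$ and $|D|^{-k}$ in the presence of a nontrivial $\Ker \DD$; since $P_{0} \in OP^{-\infty}$, the replacement does not alter any of the noncommutative integrals. Once these conventions are fixed and the simple dimension spectrum hypothesis is in force, reality in each of the five cases follows immediately from the pairing of Lemma \ref{adjoint} with cyclicity.
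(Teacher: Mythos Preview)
Your proposal is correct and matches the paper's intended approach: the corollary is placed immediately after Lemma \ref{adjoint} with no written proof, so it is meant to follow at once from $\ncint X^{*}=\overline{\ncint X}$ together with the trace property of $\ncint$ (Proposition \ref{tracenc}), exactly as you argue. Your case-by-case verification of $\ncint X^{*}=\ncint X$ via cyclicity, including the handling of $\chi$ through $[\chi,|\DD|]=0$, is the expected computation.
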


In \cite{ConnesMarcolli}, is introduced the following

\begin{definition}
\label{Deftadpole}
In $(\A,\,\H,\,\DD)$, the tadpole
$\Tad_{\DD+ A}(k)$ of order $k$, for  $k \in \set{d-l \, : \, l \in \N}$  is
the term linear in $A=A^{*}\in \Omega_\DD^1$, in the $\Lambda^k$ term
of 
\eqref{formuleaction} (considered as an infinite series) where
$\DD_{A}=\DD+ A$.

If moreover, the triple $(\A,\,\H,\,\DD,\,J)$ is real, the tadpole
$\Tad_{\DD+\tilde A}(k)$ is the term linear in $A$, in the $\Lambda^k$ term
of \eqref{formuleaction} where $\DD_{A}=\DD+\wt A$.
\end{definition}

\begin{prop}
    \label{valeurtadpole}
Let $(\A,\,\H,\,\DD)$ be a spectral triple of dimension $d$ with simple dimension spectrum. Then 
\begin{align}
    \label{tadpolen-k}
&\Tad_{\DD+A}(d-k) =-(d-k)\ncint A \DD |\DD|^{-(d-k) -2}, \quad \forall  k\neq d,\\
    \label{tadpole0}
& \Tad_{\DD+ A}(0)=-\ncint  A \DD^{-1}.
\end{align}
Moreover, if the triple is real,  $\Tad_{\DD+\wt A}= 2\Tad_{\DD+A}$.
\end{prop}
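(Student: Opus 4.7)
The plan is to isolate the $A$-linear part of the coefficient of $\Lambda^{d-k}$ in (\ref{formuleaction}). By Proposition \ref{ncintfluctuated} this amounts to computing the $A$-linear part of $\underset{s=d-k}{\Res}\,\zeta_{D_A}(s)$. Lemma \ref{Res-zeta-n-k} expands this residue as a sum over $p\geq 1$ of traces involving $p$ factors of $\eps^{r_i}(Y)$, and by Lemma \ref{2dev}(i) the operator $Y$ is at least of order one in $X = \DD_A^2 - \DD^2 = A\DD+\DD A+A^2$. Consequently only $p=1$ contributes to the linear-in-$A$ part, and only the linear piece $X_{\mathrm{lin}} := A\DD+\DD A$ of $X$ enters. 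The resulting expression is
$$\sum_{r_1=0}^{k-1} h(d-k,r_1,1)\,\ncint \eps^{r_1}(Y_{\mathrm{lin}})\,|D|^{-(d-k)},$$
where $Y_{\mathrm{lin}}\sim \sum_{k_1\ge 0}\tfrac{(-1)^{k_1}}{k_1+1}\nabla^{k_1}(X_{\mathrm{lin}})D^{-2(k_1+1)}$ modulo $OP^{-\infty}$.

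The next step is to kill every contribution with $r_1\ge 1$ and every one with $k_1\ge 1$. The key observation is that $D^{-2j}|D|^{-(d-k)}$ commutes exactly with $\DD^2$ (since $|D|-|\DD|=P_0$ annihilates $\DD^2$ both on $\Ker\DD$ and on its orthogonal complement), so for any $Z\in\Psi(\A)$ and any $j,r\in\N_0$,
$$\nabla^{r}(Z)\, D^{-2j}|D|^{-(d-k)} \,=\, \nabla^{r}\bigl(Z\,D^{-2j}|D|^{-(d-k)}\bigr).$$
For $r\ge 1$ (resp.\ $k_1\ge 1$) the right-hand side is a total $\nabla$-derivative, whose noncommutative integral vanishes by the trace property of $\ncint$ (Proposition \ref{tracenc}); smoothing terms vanish trivially. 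Only the $r_1=0,\,k_1=0$ term survives, yielding
$$\Tad_{\DD+A}(d-k) \,=\, -\tfrac{d-k}{2}\,\ncint (A\DD+\DD A)\,|\DD|^{-(d-k)-2},$$
where I replaced $|D|^{-(d-k)-2}$ by $|\DD|^{-(d-k)-2}$ modulo smoothing. Since $\DD$ commutes with $|\DD|^{\alpha}$ and $\ncint$ is cyclic, $\ncint\DD A|\DD|^{-(d-k)-2} = \ncint A\DD|\DD|^{-(d-k)-2}$, producing the factor $2$ and equation (\ref{tadpolen-k}). The case $k=d$ is handled directly from (\ref{termconstanttilde}) (applied with $A$ in place of $\wt A$ in the non-real setting): only the $q=1$ summand is linear in $A$, giving $\Tad_{\DD+A}(0)=-\ncint A\DD^{-1}$, which is (\ref{tadpole0}).

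For the real version, $\wt A = A+\eps\,JAJ^{-1}$ is linear in $A$, so $\Tad_{\DD+\wt A}(\cdot)$ is the sum of $\Tad_{\DD+A}(\cdot)$ and the analogous tadpole with $A$ replaced by $\eps\,JAJ^{-1}$. Using $J\DD = \eps\DD J$ and $J|\DD|=|\DD|J$,
$$J\bigl(A\DD|\DD|^{-(d-k)-2}\bigr)J^{-1} \,=\, \eps\,(JAJ^{-1})\DD\,|\DD|^{-(d-k)-2}.$$
Combining Lemma \ref{adjoint} with the reality $\ncint A\DD|\DD|^{-(d-k)-2}\in\R$ from Corollary \ref{reel} yields $\ncint (JAJ^{-1})\DD|\DD|^{-(d-k)-2} = \eps\,\ncint A\DD|\DD|^{-(d-k)-2}$. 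Hence the $JAJ^{-1}$ contribution to $\Tad_{\DD+\wt A}$ equals $\eps\cdot\eps = 1$ times that of $A$, giving the global factor $2$. The identical argument with $J\DD^{-1}J^{-1}=\eps\DD^{-1}$ handles $\Tad(0)$. The main technical obstacle is the cancellation of higher-order $r_1$ and $k_1$ terms described above, which rests on combining the trace property of $\ncint$ with the exact commutation $[\DD^2,|D|^\alpha]=0$; once this is in place, the remaining computations are straightforward.
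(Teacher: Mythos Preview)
Your proof is correct and follows essentially the same route as the paper's: both isolate the $p=1$ term in Lemma~\ref{Res-zeta-n-k}, expand $Y_{\mathrm{lin}}$ via Lemma~\ref{2dev}(i), and kill all contributions with $r_1\ge 1$ or $k_1\ge 1$ using that $[\DD^2,D^{-2j}|D|^{-(d-k)}]=0$ together with the trace property $\ncint\nabla(T)=0$. The only organizational difference is that the paper first rewrites $\eps^{r}(\{\nabla^l(\wt A),\DD\})=\{\nabla^{r+l}(\wt A),\DD\}D^{-2r}$ and then kills the terms indexed by $r+l>0$ in one stroke, whereas you dispose of $r_1\ge 1$ before expanding $Y_{\mathrm{lin}}$; both lead to the same surviving $(r_1,k_1)=(0,0)$ term with $h(d-k,0,1)=-\tfrac{d-k}{2}$. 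Your treatment of the real case via Lemma~\ref{adjoint} and Corollary~\ref{reel} is also the argument the paper alludes to.
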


\begin{proof}
By Lemma \ref{Res-zeta-n-k} and Proposition \ref{ncintfluctuated}, we have the following
formula,
for any $k\in \N$,
$$
\ncint |\DD_A|^{-(d-k)}=
\ncint |\DD|^{-(d-k)}+
\sum_{p=1}^k \sum_{r_1,\cdots, r_p =0}^{k-p}
\underset{s=d-k}{\Res} \, h(s,r,p) \, \Tr\big(\eps^{r_1}(Y)
\cdots\eps^{r_p}(Y) |\DD|^{-s}\big),
$$
where
\begin{align*}
\hspace{1cm}&h(s,r,p):=(-s/2)^p\int_{0\leq t_1\leq \cdots \leq
t_p\leq 1}
g(-st_1,r_1)\cdots
g(-st_p,r_p) \, dt,  \\
&\eps^r (T):=\nabla(T)\DD^{-2r}, \, \nabla(T):=[\DD^{2},T],\\
&g(z,r):=\tbinom{z/2}{r} \text{ with }g(z,0):=1,\\
&Y \sim \sum_{q=1}^N\sum_{k_1,\cdots,k_q =0}^{N-q} \Ga_q^k(X)
\DD^{-2(|k|_1+q)} \mod OP^{-N-1} \text{ for any N}\in \N^*, \\
& X:=\wt A \DD + \DD \wt A+\wt A^2, \wt A :=A+\epsilon JAJ^{-1},\\
& \Ga_q^k(X):=\tfrac{(-1)^{|k|_1+q+1}}{|k|_1+q} \nabla^{k_q}
\big(X\nabla^{k_{q-1}}(\cdots X\nabla^{k_1}(X)\cdots)\big)
\,,\,\forall q\in \N^* \,,\,
k=(k_1,\cdots,k_q)\in \N^q.
\end{align*}
As a consequence, for $k\neq n$, only the terms with $p=1$ contribute
to the linear part:
$$
\Tad_{\DD + \wt A}(d-k)= \Lin_A(\ncint |\DD_A|^{-(d-k)}) =\sum_{r=0}^{k-1}
\underset{s=d-k}{\Res} \, h(s,r,1) \,
\Tr\big(\eps^{r}(\Lin_A(Y))|\DD|^{-s}\big)\, .
$$
We check that for any $N\in \N^*$, 
$$
\Lin_A(Y) \sim \sum_{l=0}^{N-1} \Ga_1^l(\wt A \DD + \DD \wt A)
\DD^{-2(l+1)} \mod OP^{-N-1}.
$$
Since $\Ga_1^l(\wt A \DD + \DD \wt A) = \tfrac{(-1)^{l}}{l+1}
\nabla^{l}
(\wt A \DD + \DD \wt A)= \tfrac{(-1)^{l}}{l+1} \{ \nabla^l(\wt A),\DD
\}$, we get, assuming the dimension spectrum to be simple
\begin{align*}
\Tad_{\DD+ \wt A}(d-k)&= \sum_{r=0}^{k-1}
\underset{s=d-k}{\Res} \, h(s,r,p) \,
\Tr\big(\eps^{r}(\Lin_A(Y))|\DD|^{-s}\big) \\
&= \sum_{r=0}^{k-1} h(d-k,r,1) \sum_{l=0}^{k-1-r}\tfrac{(-1)^{l}}{l+1}
\underset{s=d-k}{\Res} \,\Tr\big( \eps^{r}(\{ \nabla^l(\wt A),\DD\})|\DD|^{-s-2(l+1)}\big)\\
 & =  2\sum_{r=0}^{k-1} h(d-k,r,1) \sum_{l=0}^{k-1-r}\tfrac{(-1)^{l}}{l+1}
\ncint \nabla^{r+l}(\wt A) \DD |\DD|^{-(d-k + 2(r+l)) -2}  \\ 
& = -(d-k)
\ncint \wt A \DD |\DD|^{-(d-k) -2},
\end{align*}
because in the last sum, only the case $r+l=0$ remains, so $r=l=0$.

Formula \eqref{tadpole0} is a direct application of \eqref{termconstanttilde}.

The link between  $\Tad_{\DD+\wt A}$ and $\Tad_{\DD+A}$ follows from $J\DD=\epsilon 
\DD J$ and Lemma \ref{adjoint}.
\end{proof}

\begin{corollary} 
\label{Atilde=0}
In a real spectral triple $(\A,H,\DD)$, if $A=A^*\in \Omega_\DD^1(\A)$ is such that $\wt A=0$,  then $\Tad_{D+A}(k) =0$ for any $k\in \Z$, $k\leq d$. 
\end{corollary}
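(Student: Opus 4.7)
The plan is to reduce the statement to the formulas for the symmetrized tadpole already computed in Proposition \ref{valeurtadpole}. The two ingredients in that proposition are (a) the explicit expressions \eqref{tadpolen-k}, \eqref{tadpole0} for $\Tad_{\DD + \wt A}(k)$ as a noncommutative integral that is \emph{linear} in $\wt A$, and (b) the identity $\Tad_{\DD + \wt A} = 2\,\Tad_{\DD + A}$ valid in any real spectral triple.

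Concretely, I would argue as follows. If $\wt A = 0$, the formulas
$$
\Tad_{\DD + \wt A}(d-k) = -(d-k)\ncint \wt A\,\DD\,|\DD|^{-(d-k)-2} \qquad (k \neq d),\qquad
\Tad_{\DD + \wt A}(0) = -\ncint \wt A\,\DD^{-1},
$$
from Proposition \ref{valeurtadpole} (applied with $\wt A$ replacing $A$ in \eqref{tadpolen-k}--\eqref{tadpole0}) vanish trivially, because the integrand is linear in $\wt A$. Hence $\Tad_{\DD + \wt A}(k)=0$ for every admissible $k$. Combining this with the second part of Proposition \ref{valeurtadpole}, namely $\Tad_{\DD + \wt A} = 2\,\Tad_{\DD + A}$, yields $\Tad_{\DD + A}(k)=0$ for all $k \leq d$ in the tadpole range.

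The edge case $k = d$ is not covered by \eqref{tadpolen-k} but poses no obstacle: by Proposition \ref{invariance1} the coefficient of $\Lambda^{d}$ in the spectral action, namely $\ncint |D_A|^{-d} = \ncint |D|^{-d}$, does not depend on $A$ at all, so its linear part in $A$ vanishes identically. There is in fact no genuine difficulty anywhere — the content of the corollary is entirely encoded in Proposition \ref{valeurtadpole}, and the only thing to check is that the relation $\Tad_{\DD+\wt A}=2\,\Tad_{\DD + A}$ used in step (b) is valid for every index $k\leq d$ (including $k=0$), which is the statement made in full generality at the end of Proposition \ref{valeurtadpole}.
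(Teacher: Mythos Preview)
Your proposal is correct and coincides with the paper's intended argument: the corollary is stated without proof as an immediate consequence of Proposition~\ref{valeurtadpole}, exactly along the lines you describe. One small indexing slip: in \eqref{tadpolen-k} the restriction ``$k\neq d$'' excludes the $\Lambda^0$ coefficient (i.e.\ $\Tad_{\DD+A}(0)$), not the $\Lambda^d$ one; that excluded case is precisely what \eqref{tadpole0} handles, so your appeal to Proposition~\ref{invariance1} for the top term, while correct, is unnecessary.
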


\begin{remark}

Note that $\tilde A=0$ for all $A=A^{*}\in \Omega_\DD^1$, 
when $\A$ is commutative and $JaJ^{-1}=a^*$, for all $a \in \A$, see \eqref{JAJ}, so one can only use $\DD_A=\DD+A$.

But we can have $\A$ commutative and $JaJ^{-1}\neq a^*$ \cite{CGravity,
Kraj}:\\
Let $\A_1=\C \oplus \C$ represented on $\H_1=\C^3$ with, for some
complex number $m\neq0$, 
\begin{align*} 
\pi_1(a)&:=  \left( \begin{array}{ccc}
b_1 & 0 & 0\\
0 & b_1 & 0\\
0 & 0 & b_2
\end{array} \right), \,\,for \,\, a=(b_1,\,b_2) \in \A \\
\DD_1&:= \left( \begin{array}{ccc}
0 & m & m\\
\bar m & 0 & 0\\
\bar m & 0 & b
\end{array} \right),  \,\,
\chi_1:=  \left( \begin{array}{ccc}
1 & 0 & 0\\
0 & -1 & 0\\
0 & 0 & -1
\end{array} \right),  
\,\,
J_1:=\left( \begin{array}{ccc}
1 & 0 & 0\\
0 & 0 & 1\\
0 & 1 & 0
\end{array} \right) \circ \, cc
\end{align*}
where $cc$ is the complex conjugation. Then ($\A_1,\,\H_1,\,\DD_1$)
is a commutative real spectral triple of dimension $d=0$ with non
zero one-forms and such that $J_1\pi_1(a)J_1^{-1}= \pi_1(a^*)$ only if $a=(b_1,b_1)$. 

Take a commutative geometry 
\big($\A_2=C^{\infty}(M), \,\H=L^2(M,S),\,\DD_2,\, \chi_2,\, J_2$\big)
defined in \ref{rieman} where $d=dim M$ is even, and then the tensor
product of the two spectral triples, namely $\A=\A_1\otimes \A_2$,
$\H=\H_1 \otimes \H_2$, $\DD=\DD_1 \otimes \chi_2 + 1\otimes \DD_2$,
$\chi=\chi_1 \otimes \chi_2$ and $J$ is either $\chi_1 J_1\otimes
J_2$ when $d\in \set{2,6}$ mod 8 or $J_1 \otimes J_2$ in other cases,
see \cite{CGravity,Vanhecke}.

Then $(\A,\, \H,\, \DD)$ is a real commutative triple of dimension $d$ 
such that $\tilde A \neq 0$ for some selfadjoint
one-forms $A$, so is not exactly like in definition \ref{rieman}.
\end{remark}

The vanishing tadpole of order 0 has the following equivalence (see \cite{CC1})
\begin{align}
  \label{equ}
\ncint A \DD^{-1}=0, \,\forall A \in \Omega_D^1(\A)
\,\Longleftrightarrow  \,\ncint a b=\ncint a \a(b), \,\forall a,b
\in \A, 
\end{align}
where $\a(b):=\DD b\DD^{-1}$, equivalence which can be generalized as

 \begin{lemma}
\label{termecumulŽ}
In a spectral triple $(\A,\,\H,\,\DD)$, for any $k\in \N$,
\begin{align*}
& \ncint (A\DD^{-1})^n=0, \,\forall A \in \Omega^{1}_{\DD}(\A)
,\,\,\forall
n\in \set{1,\cdots,k} \Longleftrightarrow  \,\ncint \prod_{j=1}^k a_j\a(b_j)
 =\ncint \prod_{j=1}^k a_jb_j, \,\, \forall
a_j,\,b_j\in \A.
\end{align*}
\end{lemma}
 
 \begin{proof}
Note that $a[\DD,b] \DD^{-1}=a\,\tilde \a (b)$ where $\tilde \a (b)
:=\a(b)-b$.

Assuming the left hand-side, we get
\begin{align*}
0&=\ncint (A\DD^{-1})^n=\ncint a_1\tilde \a(b_1)\ldots a_j\tilde \a(b_j) \ldots a_n\tilde
\a(b_n)\\
&=\ncint a_1\tilde \a(b_1)\ldots a_j \a(b_j)a_{j+1}\tilde\a(b_{j+1})
\ldots a_k\tilde \a(b_k)-\ncint a_1\tilde \a(b_1)\ldots a_jb_ja_{j+1}\tilde \a(b_j) \ldots
a_n\tilde \a(b_n)
\end{align*}
$\forall \,a_j,\,b_j \in \A$. But the last term is zero if $\ncint
{(A\DD^{-1})}^{n-1}=0$ for all $A$. By induction, we end up with 
$0=\ncint a_1\a(b_1) \cdots a_{n-1}\a(b_{n-1})\,a_n \tilde \a(b_n)$. Varying $n$ between $1$ and $k$, we get the right hand-side.
\end{proof}

\chapter{Spectral action on noncommutative torus}

\section{Introduction}

In \cite{GI2002}, the spectral action on NC-tori was only computed
only for operators of the form $\DD + A$ and computed for $\DD_{A}$
in \cite{GIVas}.  It appears that the implementation of
the real structure via $J$, does not change the spectral action, up to a
coefficient, when the torus has dimension 4.
Here we prove that this can be also directly obtained from
the Chamseddine--Connes analysis of \cite{CC1} that we follow quite
closely. Let us recall that
\begin{align*}
    \SS(\DD_{A},\Phi,\Lambda) \, = \,\sum_{0<k\in Sd^+} \Phi_{k}\,
    \Lambda^{k} \ncint \vert D_{A}\vert^{-k} + \Phi(0) \,
    \zeta_{D_{A}}(0) +\mathcal{O}(\Lambda^{-1})
\end{align*}
where $D_A = \DD_A + P_A$, $P_A$ the projection on $\Ker \DD_A$,
$\Phi_{k}= \half\int_{0}^{\infty} \Phi(t) \, t^{k/2-1} \, dt$ and
$Sd^+$ is the strictly positive part of the spectrum dimension of $(\A,\H,\DD)$.

In section 2, all previous technical points are widely used for
the computation of terms in (\ref{formuleaction}) or
(\ref{termconstanttilde}) on noncommutative torus. Most of the terms are based on residues of certain zeta functions and series of zeta functions that are studied in section 4. We show in particular that the vanishing tadpole hypothesis is satisfied on the torus.

The spectral action is obtained in section 3
and we conjecture that the noncommutative spectral action
of $\DD_{A}$ has terms proportional to the spectral action
of $\DD+A$ on the commutative torus.

Since the computation of zeta functions is crucial here, we
investigate in section 4 residues of series and integrals.
This section contains independent interesting results on the
holomorphy of series of holomorphic functions. In particular,
the necessity of a Diophantine constraint is naturally emphasized.

All these results on spectral action are quite important in physics,
especially in quantum field theory and particle physics, where one
adds to the effective action some counterterms explicitly given by
(\ref{termconstanttilde}), see for instance
\cite{Carminati,CC,CC1,CCM,Gayral,Goursac,GI2002,GIVas,
Knecht,Strelchenko,Vassilevich2002,Vassilevich2005,Vassilevich2007}.

\section{The noncommutative torus}
\subsection{Notations}

Let $\Coo(\T^n_\Th)$ be the smooth noncommutative
$n$-torus associated to a non-zero skew-symmetric deformation matrix
$\Th \in
M_n(\R)$ (see \cite{ConnesTorus}, \cite{RieffelRot}). This means that
$\Coo(\T^n_\Th)$ is the algebra generated by $n$
unitaries $u_i$, $i=1,\dots,n$ subject to the relations
\begin{equation}
\label{rel}
u_i\,u_j=e^{i\Th_{ij}}\,u_j\,u_i,
\end{equation}
and with
Schwartz coefficients: an
element $a\in\Coo(\T_\Th^n)$ can be written as
$a=\sum_{k\in\Z^n}a_k\,U_k$, where $\{a_k\}\in\SS(\Z^n)$ with
the Weyl elements defined by $U_k:=e^{-\frac i2 k.\chi
k}\,u_1^{k_1}\cdots
u_n^{k_n}$, $k\in\Z^n$, relation
\eqref{rel} reads
\begin{equation}
\label{rel1}
U_{k}U_{q}=e^{-\frac i2 k.\Theta q} \,U_{k+q}, \text{ and }
U_{k}U_{q}=e^{-i k.\Theta q} \,U_{q}U_{k}
\end{equation}
where $\chi$ is
the matrix restriction of $\Theta$ to its upper triangular part.
Thus unitary operators $U_{k}$ satisfy $U_{k}^*=U_{-k}$ and
$[U_{k},U_{l}]=-2i\sin(\frac 12 k.\Th l)\,U_{k+l}$.

Let $\tau$ be the trace on $\Coo(\T^n_\Th)$ defined by
$\tau\big( \sum_{k\in\Z^n}a_k\,U_k \big):=a_0$
and $\H_{\tau}$ be the GNS Hilbert space obtained
by completion of $ \Coo(\T_\Th^n)$
with respect of the norm induced by the scalar product
$\langle a,b\rangle:=\tau(a^*b)$.
On $\H_{\tau}=\set{\sum_{k\in\Z^n}a_k\,U_k \, : \, \{a_{k}\}_{k} \in
l^2(\Z^n) }$, we consider the left and right regular
representations of
$\Coo(\T_\Th^n)$ by bounded operators, that we denote respectively
by $L(.)$ and $R(.)$.

Let also $\delta_\mu$, $\mu\in \set{1,\dots,n}$, be the $n$ (pairwise
commuting)
canonical derivations, defined by
\begin{equation}
\delta_\mu(U_k):=ik_\mu U_k. \label{dUk}
\end{equation}

We need to fix notations: let $\A_{\Th}:=C^{\infty}(\T_{\Th}^n)$
acting on $\H:=\H_{\tau}\otimes \C^{2^m}$ with $n=2m$ or $n=2m+1$
(i.e., $m=\lfloor \tfrac n2 \rfloor$ is the integer part of $\tfrac
n2$),
the square integrable sections of the trivial spin bundle over $\T^n$.

Each element of $\A_{\Th}$ is represented on $\H$ as
$L(a)\otimes1_{2^m}$ where $L$ (resp. $R$) is the left (resp. right)
multiplication. The Tomita conjugation $J_{0}(a):=a^*$
satisfies $[J_{0},\delta_{\mu}]=0$ and we define
$J:=J_{0}\otimes C_{0}$
where $C_{0}$ is an operator on $\C^{2^m}$.
The Dirac operator is given by
\begin{align}
\label{defDirac}
\DD:=-i\,\delta_{\mu}\otimes \gamma^{\mu},
\end{align}
where we use hermitian Dirac matrices $\gamma$. It is defined and
symmetric on the
dense subset of $\H$ given by $C^{\infty}(\T_{\Th}^n) \otimes
\C^{2^{m}}$. We still note $\DD$ its selfadjoint extension. This
implies
\begin{align}
    \label{CGamma}
C_{0}\ga^{\alpha}=-\eps \ga^\alpha C_{0},
\end{align}
and
$$
\DD\ U_k \otimes e_i = k_\mu U_k \otimes \gamma^\mu e_i ,
$$
where $(e_i)$ is the canonical basis of $\C^{2^m}$. Moreover,
$C_{0}^2=\pm 1_{2^m}$
depending on the parity of $m$. Finally, one introduces the chirality
(which in the
even case is $\chi:=id \otimes (-i)^{m} \gamma^1 \cdots \gamma^{n}$)
and this yields
that $(\A_{\Th},\H,\DD,J,\chi)$ satisfies all axioms of a spectral
triple, see
\cite{Book,Polaris}.

The perturbed Dirac operator $V_{u}\DD V_{u}^*$ by
the unitary
$$
V_{u}:=\big(L(u)\otimes 1_{2^m}\big)J\big(L(u)\otimes
1_{2^m}\big)J^{-1},
$$
defined for every unitary $u \in \A$,
$uu^{*}=u^{*}u=U_{0}$,
must satisfy condition (\ref{Jcom}) (which is equivalent
to $\H$ being endowed with a structure of $\A_{\Th}$-bimodule). This
yields the necessity of a symmetrized covariant Dirac operator:
$$
\DD_{A}:=\DD + A + \epsilon J\,A\,J^{-1}$$
since
$V_{u}\DD V_{u}^{*}=\DD_{L(u)\otimes 1_{2^m}[\DD,L(u^{*})
\otimes 1_{2^m}]}$:
in fact, for $a \in \A_{\Th}$, using $J_{0}L(a)J_{0}^{-1}=R(a^*)$,
we get $$\epsilon J\big(L(a)\otimes
\gamma^{\alpha}\big)J^{-1}=-R(a^*)\otimes \gamma^{\alpha}$$
and that the representation $L$ and the
anti-representation $R$ are $\C$-linear, commute and satisfy
$$
[\delta_{\alpha},L(a)]=L(\delta_{\alpha}a),\quad
[\delta_{\alpha},R(a)]=R(\delta_{\alpha}a).
$$
This induces some covariance property for the Dirac operator:
one checks that for all $k \in \Z^{n}$,
\begin{align}
\label{puregauge1}
L(U_{k})\otimes 1_{2^m}[\DD,L(U_{k}^{*})\otimes 1_{2^m}]&=1\otimes
(-k_{\mu}\ga^{\mu}),
\end{align}
so with (\ref{CGamma}), we get $U_{k}[\DD,U_{k}^{*}]+\epsilon
JU_{k}[\DD,U_{k}^{*}]J^{-1}=0$ and
\begin{align}
\label{covariance}
V_{U_{k}} \,\DD \, V_{U_{k}}^{*}=
\DD=\DD_{L(U_{k})\otimes 1_{2^m}[\DD,L(U_{k}^{*})\otimes 1_{2^m}]}.
\end{align}
Moreover, we get the gauge transformation:
\begin{align}
\label{gaugeDirac}
V_{u} \DD_{A} V_{u}^{*}= \DD_{\gamma_{u}(A)}
\end{align}
where the gauged transform one-form of $A$ is
\begin{align}
\label{gaugetransform}
\gamma_{u}(A):=u[\DD,u^{*}]+uAu^{*},
\end{align}
with the shorthand
$L(u)\otimes 1_{2^m} \longrightarrow u$.

As a consequence, the spectral action is gauge invariant:
$$
\SS(\DD_{A},\Phi,\Lambda)=\SS(\DD_{\gamma_{u}(A)},\Phi,\Lambda).
$$

An arbitrary selfadjoint one-form $A$, can be written as
\begin{equation}
\label{connection}
A = L(-iA_{\alpha})\otimes\gamma^{\alpha},\,\, A_{\alpha}
=-A_{\alpha}^* \in
\A_{\Th},
\end{equation}
thus
\begin{equation}
\label{dirac}
\DD_{A}=-i\,\big(\delta_{\alpha}+L(A_{\alpha})-R(A_{\alpha})\big)
 \otimes \gamma^{\alpha}.
\end{equation}
Defining $$\tilde A_{\alpha}:=L(A_{\alpha})-R(A_{\alpha}),$$
we get
$\DD_{A}^2=-g^{{\alpha}_{1} {\alpha}_{2}}(\delta_{{\alpha}_{1}}+\tilde
A_{{\alpha}_{1}})(\delta_{{\alpha}_{2}}+\tilde
A_{{\alpha}_{2}})\otimes 1_{2^m} - \tfrac 12
\Omega_{{\alpha}_{1} {\alpha}_{2}}\otimes \gamma^{{\alpha}_{1}
{\alpha}_{2}}
$
where
\begin{align*}
\gamma^{{\alpha}_{1} {\alpha}_{2}}
&:=\tfrac 12(\gamma^{{\alpha}_{1}}\gamma^{{\alpha}_{2}}
-\gamma^{{\alpha}_{2}}\gamma^{{\alpha}_{1}}) ,\\
\Omega_{{\alpha}_{1} {\alpha}_{2}}
&:=[\delta_{{\alpha}_{1}}+
\tilde A_{{\alpha}_{1}},\delta_{{\alpha}_{2}}
+\tilde A_{{\alpha}_{2}}]\,
=L(F_{{\alpha}_{1} {\alpha}_{2}}) - R(F_{{\alpha}_{1} {\alpha}_{2}})
\end{align*}
with
\begin{align}
    \label{Fmunu}
F_{{\alpha}_{1} {\alpha}_{2}}:=\delta_{{\alpha}_{1}}(A_{{\alpha}_{2}})
-\delta_{{\alpha}_{2}}(A_{{\alpha}_{1}})+[A_{{\alpha}_{1}},A_{{\alpha}_{2}}].
\end{align}
In summary,
\begin{align}
\label{D2}
\DD_{A}^2=-\delta^{{\alpha}_{1} {\alpha}_{2}}
\Big(
\delta_{{\alpha}_{1}}+L(A_{{\alpha}_{1}})-R(A_{{\alpha}_{1}})\Big)
\Big(\delta_{{\alpha}_{2}}+L(A_{{\alpha}_{2}})-R(A_{{\alpha}_{2}})\Big)
\otimes 1_{2^m}
\nonumber\\
-\tfrac 12\,\big(L(F_{{\alpha}_{1} {\alpha}_{2}}) - R(F_{{\alpha}_{1}
{\alpha}_{2}})\big)
\otimes \gamma^{{\alpha}_{1} {\alpha}_{2}}.
\end{align}

\subsection{Kernels and dimension spectrum}
We now compute the kernel of the perturbed Dirac operator:
\begin{prop}
    \label{noyaux}
(i) $\Ker \DD=U_0\otimes \C^{2^m}$, so $\dim \Ker \DD =
2^m$.

(ii) For any selfadjoint one-form $A$, $\Ker \DD \subseteq \Ker
\DD_A$.

(iii) For any unitary $ u\in \A$, $\Ker \DD_{\gamma_{u}(A)}=V_{u}\,
\Ker \DD_{A}$.
\end{prop}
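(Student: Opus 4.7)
The plan is to handle the three assertions in turn; all three follow from direct computations with the Weyl basis $\{U_k\otimes e_i\}$, combined with the gauge covariance of $\DD_A$ established earlier in this section.

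For (i), I would expand an arbitrary $\phi\in\H$ as $\phi=\sum_{k\in\Z^n}U_k\otimes v_k$ with $v_k\in\C^{2^m}$. The explicit action $\DD(U_k\otimes e_i)=k_\mu U_k\otimes\gamma^\mu e_i$ together with the orthogonality of the Weyl basis reduces the equation $\DD\phi=0$ to the pointwise conditions $k_\mu\gamma^\mu v_k=0$ for every $k\in\Z^n$. Since the hermitian Dirac matrices satisfy $\{\gamma^\mu,\gamma^\nu\}=2\delta^{\mu\nu}$, one has $(k_\mu\gamma^\mu)^2=|k|^2\,1_{2^m}$, so $k_\mu\gamma^\mu$ is invertible whenever $k\neq 0$, forcing $v_k=0$ there, while $v_0$ remains unconstrained. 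Hence $\Ker\DD=U_0\otimes\C^{2^m}$, of dimension $2^m$.

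For (ii), the key is to evaluate $A+\epsilon JAJ^{-1}$ on $U_0\otimes e_i$. Writing $A=L(-iA_\alpha)\otimes\gamma^\alpha$ with $A_\alpha^*=-A_\alpha$, the identity $\epsilon J(L(a)\otimes\gamma^\alpha)J^{-1}=-R(a^*)\otimes\gamma^\alpha$ recalled in the text, together with $(-iA_\alpha)^*=-iA_\alpha$, yields
\[
A+\epsilon JAJ^{-1}=-i\bigl(L(A_\alpha)-R(A_\alpha)\bigr)\otimes\gamma^\alpha,
\]
in agreement with \eqref{dirac}. Since $L(A_\alpha)U_0=A_\alpha U_0=A_\alpha=U_0 A_\alpha=R(A_\alpha)U_0$, this operator annihilates each $U_0\otimes e_i$. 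Combined with $\DD(U_0\otimes e_i)=0$ from (i), I conclude that $\DD_A$ vanishes on $\Ker\DD$, hence $\Ker\DD\subseteq\Ker\DD_A$.

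For (iii), I would directly invoke the gauge covariance \eqref{gaugeDirac}, namely $V_u\DD_A V_u^{*}=\DD_{\gamma_u(A)}$, together with the unitarity of $V_u$. The covariance gives $V_u\Ker\DD_A\subseteq\Ker\DD_{\gamma_u(A)}$ at once; conjugating it by $V_u^{*}$ produces the equivalent relation $V_u^{*}\DD_{\gamma_u(A)}V_u=\DD_A$, from which the reverse inclusion $\Ker\DD_{\gamma_u(A)}\subseteq V_u\Ker\DD_A$ follows symmetrically. The only delicate step in the whole argument is the sign and adjoint bookkeeping for $A+\epsilon JAJ^{-1}$ in (ii), where the antilinearity of $J$ interacts with the anti-hermiticity of $A_\alpha$; once this is sorted out the remaining computations are routine.
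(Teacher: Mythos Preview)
Your proposal is correct and follows essentially the same route as the paper: for (i) the paper applies $\DD^2$ directly to obtain $|k|^2 c_{k,j}=0$, which is exactly your observation $(k_\mu\gamma^\mu)^2=|k|^2\,1_{2^m}$ recast; for (ii) the paper likewise uses $(A+\epsilon JAJ^{-1})(U_0\otimes v)=-i[A_\alpha,U_0]\otimes\gamma^\alpha v=0$ via \eqref{dirac}; and (iii) is in both cases an immediate consequence of the gauge covariance \eqref{gaugeDirac}.
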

\begin{proof}
$(i)$ Let $\psi =\sum_{k,j} c_{k,j} \, U_k \otimes e_j \in \Ker \DD$.
Thus, $0=\DD^2
\psi =\sum_{k,i} c_{k,j} |k|^2\, U_k \otimes e_j$ which entails that
$c_{k,j}|k|^2=0$
for any $k \in \Z^n$ and $1\leq j\leq 2^m$. The result follows.

$(ii)$ Let $\psi \in \Ker \DD$. So, $\psi = U_0 \otimes v$
with $v\in \C^{2^m}$ and from (\ref{dirac}), we get
\begin{align*}
\DD_A \psi &= \DD \psi + (A+\epsilon J AJ^{-1})\psi = (A+\epsilon J
AJ^{-1})\psi=
-i[A_\a,U_0]\otimes \ga^\a v = 0
\end{align*}
since $U_{0}$ is the unit of the algebra, which proves that $\psi \in
\Ker \DD_A$.

$(iii)$ This is a direct consequence of (\ref{gaugeDirac}).
\end{proof}

\begin{corollary}
Let $A$ be a selfadjoint one-form. Then
$\Ker \DD_A=\Ker \DD$ in the following cases:

 (i) $A_{u}:=L(u)\otimes
1_{2^m}[\DD,L(u^*)\otimes 1_{2^m}]$ when $u$ is a unitary in $\A$.

 (ii) $\vert \vert A \vert \vert <\tfrac12$.

 (iii) The matrix $\tfrac{1}{2\pi}\Th$ has only integral coefficients.
 \end{corollary}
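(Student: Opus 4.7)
The plan is, in each of the three cases, to establish the reverse inclusion $\Ker \DD_A \subseteq \Ker \DD$, the other inclusion being already provided by Proposition \ref{noyaux}(ii).

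For case $(i)$, I would exploit the gauge covariance (\ref{gaugeDirac}) at $A=0$: in the shorthand $L(u)\otimes 1_{2^m}\to u$, one has $A_u = \gamma_u(0) = u[\DD,u^*]$, so Proposition \ref{noyaux}(iii) gives $\Ker \DD_{A_u} = V_u\,\Ker \DD$. It then suffices to verify that $V_u$ stabilises $\Ker \DD = U_0\otimes\C^{2^m}$. Using $J_0 L(u) J_0^{-1} = R(u^*)$ together with $C_0\,1_{2^m}\,C_0^{-1}=1_{2^m}$ one obtains $V_u = L(u)R(u^*)\otimes 1_{2^m}$, and then $V_u(U_0\otimes e_j) = u\,U_0\,u^*\otimes e_j = U_0\otimes e_j$ by unitarity of $u$.

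For case $(ii)$, I would decompose any $\psi\in\Ker \DD_A$ orthogonally as $\psi = \psi_0 + \psi_1$ with $\psi_0\in\Ker \DD$ and $\psi_1\in(\Ker \DD)^\perp$. The proof of Proposition \ref{noyaux}(ii) in fact shows that $\wt A := A+\eps JAJ^{-1}$ annihilates $\Ker \DD$, so $0=\DD_A\psi$ reduces to $\DD\psi_1 = -\wt A\psi_1$ and therefore
\[
\norm{\DD\psi_1} \leq \norm{\wt A}\,\norm{\psi_1} \leq 2\norm{A}\,\norm{\psi_1} < \norm{\psi_1},
\]
where I used that $J$ is an antilinear isometry, so that $\norm{JAJ^{-1}}=\norm{A}$. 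On the other hand, the orthonormal family $\{U_k\otimes e_j\}_{k\neq 0}$ diagonalises $\DD^2$ on $(\Ker \DD)^\perp$ with eigenvalues $|k|^2\geq 1$, so $\norm{\DD\psi_1}\geq\norm{\psi_1}$. These two bounds force $\psi_1=0$, hence $\psi\in \Ker \DD$.

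For case $(iii)$, if $(2\pi)^{-1}\Th$ has only integer entries then (\ref{rel1}) yields $U_k U_q = U_q U_k$ for all $k,q\in\Z^n$, so $\A_\Th$ is commutative; consequently $L(A_\alpha) = R(A_\alpha)$ on $\H_\tau$ and the component $\tilde A_\alpha = L(A_\alpha)-R(A_\alpha)$ vanishes for each $\alpha$, which by (\ref{dirac}) means $\DD_A=\DD$. The delicate step among the three is $(ii)$, where one has to combine three ingredients: the vanishing of $\wt A$ on $\Ker \DD$ (inherited from the proof of Proposition \ref{noyaux}(ii)), the spectral gap $|k|^2\geq 1$ for $k\neq 0$, and the norm bound $\norm{\wt A}\leq 2\norm{A}$.
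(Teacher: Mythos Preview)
Your proof is correct and follows essentially the same approach as the paper in all three cases: gauge covariance plus $V_u(U_0\otimes v)=U_0\otimes v$ for $(i)$, the spectral gap $\|\DD\psi_1\|\geq\|\psi_1\|$ on $(\Ker\DD)^\perp$ versus the bound $\|\wt A\psi_1\|\leq 2\|A\|\,\|\psi_1\|$ for $(ii)$, and commutativity forcing $\wt A=0$ for $(iii)$. The only cosmetic difference is that in $(ii)$ the paper computes $\|\DD\psi'\|^2=\sum_{k\neq 0,j}|k|^2|c_{k,j}|^2$ explicitly via the orthogonality of the vectors $U_k\otimes X_k e_j$, whereas you invoke the eigenvalue equation $\DD^2(U_k\otimes e_j)=|k|^2\,U_k\otimes e_j$ directly; both yield the same inequality.
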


\begin{proof}
$(i)$ This follows from previous result because
$V_{u} (U_{0}\otimes v)= U_{0} \otimes v$ for any $v\in \C^{2^m}$.

$(ii)$ Let $\psi=\sum_{k,j}c_{k,j}\, U_{k}\otimes e_{j} $ be in $ \Ker
\DD_{A}$ (so $\sum_{k,j} \vert c_{k,j}\vert^2<
\infty$) and $\phi:=\sum_{j}c_{0,j}\, U_{0}\otimes e_{j}$. Thus
$\psi':=\psi-\phi \in \text{Ker }\DD_{A}$ since $\phi \in \Ker \DD
\subseteq \Ker \DD_A$ and
$$
\vert \vert \sum_{0\neq k \in \Z^n,\,j}
c_{k,j}\,k_{\alpha}\,U_{k}\otimes
\gamma^{\alpha}e_{j}\vert \vert^2=\vert \vert \DD
\psi'\vert\vert^2=\vert \vert -(A + \epsilon
JAJ^{-1})\psi'\vert \vert^2 \leq 4\vert\vert A \vert \vert^{2}\vert
\vert \psi' \vert
\vert^{2} <\vert \vert \psi' \vert \vert^{2}.
$$
Defining $X_{k}:=\sum_{\alpha}k_{\alpha}\gamma_{\alpha}$,
$X_{k}^{2}=\sum_{\alpha}\vert k_{\alpha}\vert^{2}\, 1_{2^{m}}$ is
invertible and the vectors $\set{U_{k}\otimes X_{k}e_{j}}_{0\neq k\in
\Z^{n},\,j}$ are orthogonal in $\H$, so
$$
\sum_{0\neq k\in \Z^{n},\,j}\big( \sum_{\alpha} \vert k_{\alpha}\vert
^{2} \big)\, \vert
c_{k,j}\vert^{2} < \sum_{0\neq k\in \Z^{n},\,j}\vert c_{k,j}\vert^{2}
$$
which is possible only if $c_{k,j}=0, \, \forall k,\,j$ that is
$\psi'=0$ et
$\psi=\phi \in \text{Ker }\DD$.

$(iii)$ This is a consequence of the fact that the algebra is
commutative, thus $A+\epsilon JAJ^{-1}=0$.
\end{proof}

Note that if $\wt A_{u}:=A_{u}+\epsilon JA_{u}J^{-1}$, then by
(\ref{puregauge1}), $\wt A_{U_{k}}=0$ for all
$k \in \Z^n$ and $\norm{A_{U_{k}}}=\vert k\vert$,
but for an arbitrary unitary $u\in \A$, $\wt A_{u}\ne 0$ so
$\DD_{A_{u}}\ne \DD$.

Naturally the above result is also a direct consequence of the fact
that the eigenspace of an isolated eigenvalue of an operator is not
modified by small perturbations. However, it is interesting to
compute the last result directly to emphasize the difficulty of the
general case:

Let $\psi=\sum_{l\in \Z^n, 1\leq j \leq 2^m}c_{l,j}\, U_{l}\otimes
e_{j}\in \Ker
\DD_A$, so $\sum_{l\in \Z^n, 1\leq j \leq 2^m} \vert c_{l,j}\vert^2
<Ê \infty$. We
have to show that $\psi\in$ Ker $\DD$ that is $c_{l,j}=0$ when $l\ne
0$.

Taking the scalar product of $\langle U_{k} \otimes e_{i}\vert$ with
$$
0=\DD_{A}\psi=\sum_{l,\,\a,\,j} c_{l,\,j}(l^{\a}U_{l}-i[A_{\a},U_{l}]
)\otimes
\gamma^{\a}e_{j},
$$
we obtain
$$
0=\sum_{l,\,\a,\,j} c_{l,\,j} \big(l^{\a}\delta_{k,l}-i\langle
U_{k},[A_{\a},U_{l}]\rangle \big)\langle e_{i},\gamma^{\a}e_{j}
\rangle.
$$
If $A_{\a}=\sum_{\a,l}a_{\a,l}\, U_{l} \otimes \gamma^{\a}$ with
$\set{a_{\a,l}}_{l}
\in \SS(\Z^n)$, note that $[U_{l},U_{m}]=-2i \sin(\tfrac 12 l.\Th m)
\, U_{l+m}$ and
$$
\langle U_{k},[A_{\a},U_{l}]\rangle = \sum_{l'\in
\Z^{n}}a_{\a,l'}(-2i \sin (\tfrac 12
l'.\Th l) \langle U_{k}, U_{l'+l}\rangle=-2i\, a_{\a,k-l} \,
\sin(\tfrac 12 k.\Th l).
$$
Thus
\begin{align}
    \label{contraintenoyau}
0=\sum_{l\in \Z^{n}}\sum_{\a=1}^{n}\sum_{j=1}^{2^{m}} c_{l,\,j}
\big(l^{\a}\delta_{k,l} -2a_{\a,k-l} \, \sin(\tfrac 12 k.\Th l) \big)
\, \langle
e_{i},\gamma^{\a}e_{j} \rangle, \quad \forall k\in \Z^n, \, \forall
i, 1\leq i \leq
2^{m}.
\end{align}

\medskip

{\it We conjecture that $\Ker \DD=\Ker \DD_A$ at least for generic
$\Th$'s}:

the constraints (\ref{contraintenoyau}) should imply $c_{l,j} = 0$
for all $j$ and all $l \neq 0$ meaning $\psi \in \Ker \DD$. When
$\tfrac{1}{2\pi}\Th$ has only integer coefficients, the sin part of
these constraints disappears giving the result.
\medskip
We shall use the following
\begin{definition}\label{ba}
(i) Let $\delta >0$. A vector $a \in \R^n$ is said to be
$\delta-$diophantine
if there exists $c >0$ such that $|q . a -m| \geq c \,|q|^{-\delta}$,
$\forall q \in \Z^n \setminus \set{0}$ and $\forall m \in \Z$. \\
We note ${\cal BV}(\delta )$ the set of $\delta-$diophantine
vectors and ${\cal
BV} :=\cup_{\delta >0} {\cal BV}(\delta)$ the set of diophantine vectors.\par
(ii) A matrix $\Th \in {\cal M}_{n}(\R)$ (real $n \times n$ matrices)
will be
said to be diophantine if there
exists $u \in \Z^n$ such that  ${}^t\Th (u)$ is a diophantine
vector of $\R^n$.
\end{definition}

\begin{lemma}
\label{spectrumset}
If $\tfrac{1}{2\pi} \Th$ is diophantine,
$Sp\big(\Coo(\T^n_\Th),\H,\DD\big)=\Z$ and all these poles are simple.
\end{lemma}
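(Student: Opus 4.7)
The plan is to compute $\zeta_D^P(s) = \Tr(P|D|^{-s})$ directly on the orthonormal basis $\{U_k \otimes e_j : k \in \Z^n,\, 1 \leq j \leq 2^m\}$ of $\H$ for an arbitrary pseudodifferential operator $P$, and show that the resulting expression is meromorphic in $s$ with simple poles located only at integers. Modulo smoothing operators (whose contribution to the trace is entire in $s$), any $P \in \Psi(\A)$ is a finite sum of operators of the form $b\, D^{-2p}$ with $b \in \DD(\A)$ and $p\in \N_0$. Using $J L(a) J^{-1} = R(a^*)$, the commutator $[\DD, L(U_\ell)\otimes 1] = i\ell_\mu L(U_\ell)\otimes \gamma^\mu$, and the action $|\DD|(U_k\otimes e_j) = |k|\, U_k\otimes e_j$, each such monomial can be rearranged as a finite linear combination of operators
\begin{equation*}
T_{\ell,m,Q,M} := L(U_\ell)\, R(U_m)\, Q(\delta)\otimes M,
\end{equation*}
multiplied by an integer power of $|D|$, where $Q$ is a polynomial in the derivations $\delta_\mu$ and $M$ is a monomial in the gamma matrices.

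Applied to $U_k\otimes e_j$, the operator $T_{\ell,m,Q,M}$ produces a multiple of $U_{k+\ell+m}\otimes (M e_j)$ whose coefficient is a phase times a polynomial in $k$. Consequently the diagonal matrix element vanishes unless $\ell+m=0$; using \eqref{rel1} one checks that $L(U_\ell)\,R(U_{-\ell})\,U_k = e^{ik\cdot\Theta\ell}\,U_k$. After summing over $k$ and taking the spinor trace, $\zeta_D^P(s)$ is, modulo a finite-rank correction coming from $\Ker\DD$, a finite sum of twisted zeta-type series
\begin{equation*}
S_\ell(s) := \sum_{k\in\Z^n\setminus\{0\}} e^{ik\cdot\Theta\ell}\,R_\ell(k)\,|k|^{-s+a_\ell},
\end{equation*}
where each $R_\ell$ is a polynomial of bounded degree and $a_\ell\in\Z$.

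For $\ell=0$, $S_0$ is a classical Epstein-type zeta function, meromorphic on $\C$ with at most simple poles located at integers. For $\ell\neq 0$, the diophantine assumption on $\tfrac{1}{2\pi}\Theta$ provides the lower bounds on $k\cdot\Theta\ell \bmod 2\pi\Z$ needed to invoke the holomorphic-continuation results of section 4, which show that $S_\ell$ extends to an entire function of $s$. Hence every pole of $\zeta_D^P(s)$ is simple and lies in $\Z$, giving the inclusion $Sp(\Coo(\T^n_\Theta),\H,\DD)\subseteq \Z$ together with simplicity. The reverse inclusion is immediate: for each $q\in\Z$, $|D|^q\in \Psi(\A)$ and $\Tr(|D|^{q-s}) = 2^m\sum_{k\neq 0}|k|^{q-s}+2^m$ has a simple pole at $s=n+q$, so as $q$ varies every integer appears. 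The main obstacle is the holomorphic continuation of $S_\ell(s)$ for $\ell\neq 0$: without control on $\Theta$ these oscillatory sums need not behave well outside the right half-plane. The diophantine condition furnishes the effective lower bounds that, combined with repeated summation-by-parts arguments, yield entire continuation; this is precisely the content of section 4 on which the proof relies.
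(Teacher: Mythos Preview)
Your overall strategy matches the paper's: compute $\Tr(P|D|^{-s})$ on the basis $\{U_k\otimes e_j\}$, obtain oscillatory Epstein-type sums, and invoke the continuation results of section~4. Your reverse inclusion $Sp\supseteq\Z$ via $P=|D|^q$ is correct and in fact more explicit than what the paper writes out.

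There is, however, a genuine gap in the reduction step. You claim that any monomial in $\DD(\A)$ can be rearranged as a combination of $L(U_\ell)R(U_m)Q(\delta)\otimes M$ times an integer power of $|D|$, with $Q$ a polynomial in the derivations $\delta_\mu$. This fails because $\DD(\A)$ contains $|\DD|$ as a generator, and unlike $\DD$, the operator $|\DD|$ does not commute through $L(U_\ell)$ into anything polynomial: $[|\DD|,L(U_\ell)]$ acts on $U_k$ as multiplication by $|k+\ell|-|k|$, which is not a polynomial in $k$. For a word such as $a'\,|\DD|^p\,a$ with $a,a'\in\A$ and $p$ odd, the diagonal matrix element on $U_k$ carries a factor $|k+\ell|^p$ that cannot be written as $R_\ell(k)\,|k|^{a_\ell}$ with $R_\ell$ polynomial. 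The series one actually meets are of the type
\[
\sum_{l,l'}\wt a_{l,l'}\ {\sum_k}'\ \prod_i|k+\wh l_i+\wh l'_i|^{p_i}\ Q(k)\ e^{-i(\sum_j l_j)\cdot\Th k}\ |k|^{-s-2p},
\]
and this is precisely why the paper invokes Theorem~\ref{zetageneral} rather than only Theorem~\ref{analytic}. Your argument is essentially correct for the subalgebra $\DD_1(\A)$ generated without $|\DD|$, but $Sp$ is defined using all of $\DD(\A)$. (A smaller inaccuracy: the combination over $\ell,m$ is not finite but an infinite Schwartz-weighted sum, since elements of $\A$ are themselves infinite series; this is harmless once the uniform bounds of section~4 are in place, but should be stated correctly.)
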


\begin{proof}
Let $B\in \DD(\A)$ and $p\in \N_0$. Suppose that $B$ is of the form
$$
B= a_r b_r
\DD^{q_{r-1}}|\DD|^{p_{r-1}} a_{r-1}b_{r-1}\cdots
\DD^{q_1}|\DD|^{p_1} a_1 b_1
$$
where $r\in \N$, $a_i \in \A$, $b_i\in J\A J^{-1}$, $q_i, p_i \in
\N_0$.
We note $a_i=:\sum_l a_{i,l}\,U_l$ and
$b_i=:\sum_i b_{i,l} \,U_l$. With the shorthand
$k_{\mu_1,\mu_{q_i}}:=k_{\mu_1}\cdots
k_{\mu_{q_i}}$ and $\ga^{\mu_1,\mu_{q_i}}=\ga^{\mu_1}\cdots
\ga^{\mu_{q_i}}$, we get
$$
\DD^{q_1}|\DD|^{p_1}  a_1 b_1 \, U_k \otimes e_j =  \sum_{l_1,\,l'_1}
a_{1,l_1} b_{1,l'_1}
U_{l_1}U_k U_{l'_1}
\,|k+l_1+l'_1|^{p_1}\,(k+l_1+l'_1)_{\mu_1,\mu_{q_1}} \otimes
\ga^{\mu_1,\mu_{q_1}} e_j
$$
which gives, after $r$ iterations,
$$
B U_k \otimes e_j = \sum_{l,l'} \wt a_{l} \wt b_{l} U_{l_r}\cdots
U_{l_1} U_k
U_{l'_1}\cdots U_{l'_r} \prod_{i=1}^{r-1} |k+\wh l_i+\wh
l'_i|^{p_i}(k+\wh l_{i} +\wh
l'_{i})_{\mu^{i}_1,\mu^i_{q_i}} \otimes
\ga^{\mu^{r-1}_1,\mu^{r-1}_{q_{r-1}}}\cdots
\ga^{\mu^1_1,\mu^1_{q_1}} e_j
$$
where $\wt a_l : = a_{1,l_1}\cdots a_{r,l_r}$ and $\wt b_{l'} : =
b_{1,l'_1}\cdots
b_{r,l'_r}$.

Let us note $F_\mu(k,l,l'):=\prod_{i=1}^{r-1}|k+\wh l_i+\wh
l'_i|^{p_i}
(k+\wh l_{i} +\wh l'_{i})_{\mu^{i}_1,\mu^i_{q_i}}$ and $\ga^\mu
:=\ga^{\mu^{r-1}_1,\mu^{r-1}_{q_{r-1}}}\cdots
\ga^{\mu^1_1,\mu^1_{q_1}}$. Thus, with
the shorthand $\sim_c$ meaning modulo a constant function towards the
variable $s$,
$$
\Tr \big(B|D|^{-2p-s}\big) \sim_c {\sum_k}' \, \sum_{l,l'} \wt a_l
\wt b_{l'} \,
\tau\big(U_{-k}U_{l_r}\cdots U_{l_1} U_k U_{l'_1}\cdots U_{l'_r}\big)
\tfrac{F_\mu(k,l,l')}{|k|^{s+2p}} \Tr (\ga^\mu)\, .
$$
Since $U_{l_r}\cdots U_{l_1} U_k = U_k U_{l_r}\cdots U_{l_1}
e^{-i\sum_1^r l_i .\Th
k}$ we get
$$\tau\big(U_{-k}U_{l_r}\cdots U_{l_1} U_k U_{l'_1}\cdots
U_{l'_r}\big)=
\delta_{\sum_1^r l_i+l'_i,0} \, e^{i\phi(l,l')} \, e^{-i\sum_1^r
l_i.\Th k}$$ where $\phi$
is a real valued function. Thus,
\begin{align*}
\Tr \big(B |D|^{-2p-s} \big)&\sim_c {\sum_k}' \, \sum_{l,l'}
e^{i\phi(l,l')}\,\delta_{\sum_1^r l_i+l'_i,0}\, \wt a_l \wt b_{l'}\,
\tfrac{F_\mu(k,l,l')\,e^{-i\sum_1^r l_i.\Th k}}{|k|^{s+2p}} \Tr
(\ga^\mu) \\
&\sim_c f_\mu(s)\Tr (\ga^\mu).
\end{align*}
The function $f_\mu(s)$ can be decomposed has a linear combination of
zeta function
of type described in Theorem \ref{zetageneral} (or, if $r=1$ or all
the $p_i$ are zero,
in Theorem \ref{analytic}).
Thus, $s\mapsto \Tr \big(B |D|^{-2p-s}\big)$
has only poles in $\Z$ and each pole is simple.
Finally, by linearity, we get the result.
\end{proof}
The dimension spectrum of the noncommutative torus is simple:
\begin{prop} \label{zeta(0)}

(i) If $\tfrac{1}{2\pi} \Th$ is diophantine,
the spectrum dimension of $\big(\Coo(\T^n_\Th),\H,\DD\big)$ is
equal to the set $\set{n-k \, :\,  k\in \N_0}$ and all these poles
are simple.

(ii) $\zeta_D(0)=0.$
\end{prop}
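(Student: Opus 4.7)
The plan is to derive both parts from material already established in the preceding sections.

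Part (i) assembles two earlier observations. Lemma \ref{spectrumset}, whose proof exploits the diophantine hypothesis on $\tfrac{1}{2\pi}\Th$ to control the series of zeta functions arising from $\Tr(B|D|^{-2p-s})$ with $B\in \DD(\A)$ and $p\in\N_0$, already yields $Sp\big(\Coo(\T^n_\Th),\H,\DD\big) = \Z$ with every pole simple. Remark \ref{remark-spectrum} then converts this into $Sd = \set{n-k : k \in \N_0}$: for $P \in OP^0$ the operator $P|D|^{-s}$ lies in $OP^{-\Re(s)}$ and is trace-class once $\Re(s) > n$, ruling out poles strictly greater than $n$, and since $Sd \subseteq Sp$ the remaining candidates are precisely the integers $\leq n$. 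Simplicity of the poles of $Sd$ is inherited from simplicity in $Sp$.

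For part (ii), the idea is to compute $\zeta_D(0)$ explicitly from the spectral decomposition of $|D|$. From $\DD\,U_k\otimes e_i = k_\mu U_k \otimes \gamma^\mu e_i$ and the Clifford relations, one obtains $\DD^2(U_k \otimes e_i) = |k|^2\, U_k \otimes e_i$ on the orthonormal basis $\set{U_k \otimes e_i : k \in \Z^n,\ 1 \leq i \leq 2^m}$ of $\H$. Hence the spectrum of $|\DD|$ consists of the values $|k|$, $k \in \Z^n$, each with multiplicity $2^m$. By Proposition \ref{noyaux}(i), $\Ker \DD = U_0 \otimes \C^{2^m}$ has dimension $2^m$, and since $D = \DD + P_0$, the operator $|D|$ coincides with $|\DD|$ on $(\Ker \DD)^\perp$ and acts as the identity on $\Ker \DD$. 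Therefore, for $\Re(s)$ large,
\[
\zeta_D(s) = 2^m + 2^m \sum_{k \in \Z^n \setminus \set{0}} |k|^{-s} = 2^m\bigl(1 + Z_n(s)\bigr),
\]
where $Z_n$ denotes the Epstein zeta function of the standard lattice $\Z^n$. A classical computation (via the functional equation $\theta(1/t) = t^{n/2} \theta(t)$ of the Jacobi theta function together with the Mellin representation $\pi^{-s/2}\Gamma(s/2) Z_n(s) = \int_0^\infty (\theta(t) - 1)\, t^{s/2 - 1}\, dt$) yields the meromorphic continuation of $Z_n$ to $\C$, regular at $0$ with the special value $Z_n(0) = -1$ for every $n \geq 1$. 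Substituting gives $\zeta_D(0) = 2^m - 2^m = 0$.

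The one nontrivial ingredient is the identity $Z_n(0) = -1$; the rest reduces to the explicit eigenvalue list of $\DD$ and Proposition \ref{noyaux}. All of the delicate analytic-continuation work is absorbed into this classical fact about Epstein zetas, which is where the $2^m$ contribution from $P_0$ is exactly cancelled.
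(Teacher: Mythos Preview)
Your proof is correct and follows essentially the same route as the paper. For (i) you invoke Lemma \ref{spectrumset} together with Remark \ref{remark-spectrum}, exactly as the paper does; for (ii) you compute $\zeta_D(s)=2^m\bigl(1+Z_n(s)\bigr)$ from the explicit spectrum of $|D|$ and then use $Z_n(0)=-1$, which is precisely equation \eqref{Zn0} in the paper (there derived from the functional equation of the Epstein zeta, whereas you sketch the equivalent Mellin--theta argument).
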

\begin{proof}
$(i)$ Lemma \ref{spectrumset} and Remark \ref{remark-spectrum}.

$(ii)$  $\zeta_D(s)={\sum}_{k\in \Z^n}
\sum_{1\leq j\leq 2^m} \<
U_k\otimes e_j, |D|^{-s}U_k\otimes e_{j}>=2^m( {\sum}'_{k\in\Z^n}
\frac{1}{|k|^{s}} + 1) =2^m(\,Z_n(s)+1).$ By (\ref{Zn0}), we get the
result.
\end{proof}

We have computed $\zeta_D(0)$ relatively easily but the
main difficulty of the present work is precisely to calculate
$\zeta_{D_A}(0)$.

\subsection{Noncommutative integral computations}

We fix a self-adjoint 1-form $A$ on the noncommutative torus of
dimension $n$.

\begin{prop}
\label{invariance} If $\tfrac{1}{2\pi}\Th$ is diophantine, then
the first elements of the expansion (\ref{formuleaction}) are
given by
\begin{align}
\ncint {|D_{A}|}^{-n}\,&=\ncint |D|^{-n}=
2^{m+1}\pi^{n/2}\,\Gamma(\tfrac{n}{2})^{-1}.\\
\ncint \vert D_{A}\vert^{n-k}&=0 \text{ for k odd}.\nonumber\\
\ncint \vert D_{A}\vert^{n-2}&=0.\nonumber
\end{align}
\end{prop}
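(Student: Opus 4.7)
The plan is to treat the three identities separately, combining the general formulas (Proposition \ref{invariance1}, Lemma \ref{residus-particuliers}, Corollary \ref{res-n-2-A}) with explicit computations specific to the noncommutative torus in the orthonormal basis $\{U_k\otimes e_j\}$.

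For the first identity $\ncint|D_A|^{-n}=\ncint|D|^{-n}$, I would invoke Proposition \ref{invariance1} directly; its hypothesis that the dimension spectrum be simple is guaranteed here by Proposition \ref{zeta(0)}(i), which uses the diophantine assumption on $\tfrac{1}{2\pi}\Theta$. For the explicit numerical value, I compute in the diagonal basis
\begin{equation*}
\zeta_D(s)=2^m\bigl(Z_n(s)+1\bigr),\qquad Z_n(s):=\sum_{0\neq k\in\Z^n}|k|^{-s},
\end{equation*}
and use the classical residue $\mathop{\mathrm{Res}}_{s=n}Z_n(s)=2\pi^{n/2}/\Gamma(n/2)$ (obtained, for instance, via the Mellin--theta representation of $Z_n$, or by comparison with the volume integral $\int_{\R^n}(1+|x|^2)^{-s/2}\,dx$ in polar coordinates).

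For the vanishing at odd $k$, I would first note that $\ncint|D|^{-(n-k)}=\mathop{\mathrm{Res}}_{s=n-k}\zeta_D(s)=0$ for every $1\le k\le n-1$, since $\zeta_D$ has its sole pole at $s=n$. Lemma \ref{Res-zeta-n-k} then reduces the task to showing that the correction sum
\begin{equation*}
\sum_{p=1}^{k}\sum_{r_1,\dots,r_p=0}^{k-p}\mathop{\mathrm{Res}}_{s=n-k}h(s,r,p)\,\Tr\bigl(\eps^{r_1}(Y)\cdots\eps^{r_p}(Y)|D|^{-s}\bigr)
\end{equation*}
vanishes. Via Lemma \ref{2dev}(i), each summand expands into finitely many traces $\Tr(Q|D|^{-s-w})$ with $Q$ a word in $\{\DD,\wt A_\alpha\}$ tensored with a Clifford monomial. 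Using $\wt A=-i\wt A_\alpha\otimes\gamma^\alpha$ with $\wt A_\alpha=L(A_\alpha)-R(A_\alpha)$ and $\DD=-i\delta_\alpha\otimes\gamma^\alpha$, the Clifford trace can be carried out and each term reduces to a zeta series of the shape
\begin{equation*}
\sum_{0\neq k\in\Z^n}\frac{P(k)}{|k|^{s+w}}\,\tau(U_k^*\,W\,U_k),
\end{equation*}
where $P$ is a polynomial and $W$ is a word in the $A_\alpha$. These series fall into the Hurwitz--Epstein class analyzed in Section 4, which the diophantine condition renders holomorphic away from $s=n$; hence all residues at shifts $n-k$ with $k$ odd vanish.

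For the vanishing at $n-2$, I would apply Corollary \ref{res-n-2-A}: its hypotheses $\ncint\wt A\DD|D|^{-n}=\ncint\DD\wt A|D|^{-n}=0$ follow at once from $\tau(U_k^*\wt A_\alpha U_k)=\tau(\wt A_\alpha)=0$ (trace property of $\tau$ applied to the commutator $\wt A_\alpha x=[A_\alpha,x]$). It then remains to verify the algebraic cancellation
\begin{equation*}
\ncint\wt A\DD\wt A\DD|D|^{-n-2}+\tfrac{n-2}{n}\ncint\wt A^2|D|^{-n}=0.
\end{equation*}
Expanding with $\wt A_\alpha U_k=\sum_l a_{\alpha,l}(-2i\sin\tfrac12 l\cdot\Theta k)\,U_{l+k}$ and using $\mathrm{tr}(\gamma^\alpha\gamma^\beta)=2^m\delta^{\alpha\beta}$ together with $\mathrm{tr}(\gamma^\alpha\gamma^\beta\gamma^\gamma\gamma^\delta)=2^m(\delta^{\alpha\beta}\delta^{\gamma\delta}-\delta^{\alpha\gamma}\delta^{\beta\delta}+\delta^{\alpha\delta}\delta^{\beta\gamma})$ turns both integrals into explicit Epstein--Hurwitz-type sums; their residues at $s=0$, extracted by Section 4, are $\Theta$-independent constants combining to zero by a contraction identity on the two sets of indices.

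The main obstacle lies in the vanishing at arbitrary odd $k\ge 3$, where Lemma \ref{Res-zeta-n-k} generates a proliferation of correction terms. The essential simplification, beyond the diophantine condition killing the oscillatory-phase contributions $\sum_k|k|^{-s}e^{-il\cdot\Theta k}$ for $l\neq 0$, is that $\nabla=[\DD^2,\cdot]$ preserves the Clifford content (since $\DD^2$ acts as a scalar on spinors); this control on the parity of the $\gamma$-factors in each term of the expansion of $\eps^{r_1}(Y)\cdots\eps^{r_p}(Y)$, together with the absence of poles of $Z_n$ at points other than $n$, forces every residue at $s=n-k$ with $k$ odd to vanish.
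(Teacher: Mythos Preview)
Your treatment of the first identity and of the case $n-2$ matches the paper: for $\ncint|D_A|^{-n}$ you correctly invoke Proposition~\ref{invariance1} together with the residue of $Z_n$, and for $n-2$ you reduce via Corollary~\ref{res-n-2-A} (whose hypotheses are Lemma~\ref{traceAD}) to the identity $\ncint\wt A\DD\wt A\DD|D|^{-n-2}=-\tfrac{n-2}{n}\ncint\wt A^2|D|^{-n}$, which the paper proves in Lemma~\ref{traceAA} by exactly the Clifford contraction you sketch.

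The genuine gap is in the odd case. Your claim that the relevant zeta series become ``holomorphic away from $s=n$'' under the diophantine hypothesis is false: by Theorem~\ref{analytic} the diophantine condition removes only the phase-carrying contributions (those with $\sum_i\eps_il_i\neq 0$), while the remaining zero-phase terms are genuine series $\sum_{k\neq 0}P(k)|k|^{-s-w}$ with pole at $s+w=n+\deg P$, not necessarily at $s=n$; were your holomorphy claim correct it would give vanishing for \emph{every} shift $\ge 1$ and render the separate $n-2$ argument superfluous. Your fallback via parity of the $\gamma$-factors also fails: since $\nabla=[\DD^2,\cdot]$ preserves Clifford degree and $X=\wt A\DD+\DD\wt A+\wt A^2$ has Clifford degree two, every term of $\eps^{r_1}(Y)\cdots\eps^{r_p}(Y)$ carries an \emph{even} number of $\gamma$-matrices, so the Clifford trace is never forced to vanish by that parity. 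The correct mechanism, isolated in Lemma~\ref{ncint-odd-pdo}, is a parity argument in the lattice variable rather than in the Clifford trace: writing $P\in\Psi_1(\A)$ as $BD^{-2p}$ modulo smoothing (with $B$ a word in $\A$, $J\A J^{-1}$, $\DD$), one obtains series $\sum_{k\neq 0}Q(k)|k|^{-s-n-2p+q}$; the pole at $s=0$ singles out the homogeneous component of $Q$ of degree $d=2p-q$, which is odd whenever $q$ is odd, and then the residue $\int_{S^{n-1}}Q_d(u)\,dS(u)$ vanishes by antipodal antisymmetry. This degree-parity argument is the missing step.
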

We need few technical lemmas:
\begin{lemma}
\label{traceAD}
On the noncommutative torus, for any $t\in \R$,
$$
\ncint \wt A \DD |D|^{-t}= \ncint \DD \wt A |D|^{-t} =0.
$$
\end{lemma}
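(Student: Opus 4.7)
The plan is to show that for $\Re(s)$ sufficiently large, the operator $\wt A \DD |D|^{-t-s}$ is trace-class and its trace, computed in the orthonormal basis $\set{U_k\otimes e_j}_{k\in \Z^n,\, 1\leq j\leq 2^m}$ of $\H$, vanishes identically. By analytic continuation, the zeta-like function $s\mapsto \Tr\bigl(\wt A \DD |D|^{-t-s}\bigr)$ is then identically zero on its meromorphic extension, so its residue at $s=0$ vanishes. The same argument will work verbatim for $\DD \wt A$.

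More precisely, first I would unfold the definition $\wt A = A+\eps JAJ^{-1} = -i\,\wt A_\a \otimes \ga^\a$, where $\wt A_\a := L(A_\a)-R(A_\a)$ acts on $\H_\tau$ as $\wt A_\a(x) = [A_\a,x]$. Writing $A_\a = \sum_{l}a_{\a,l}U_l$ with $\{a_{\a,l}\} \in \SS(\Z^n)$ and using the commutation relation $[U_l,U_k] = -2i\sin(\tfrac12 l.\Th k)\,U_{l+k}$ from \eqref{rel1}, one gets
$$\wt A_\a\, U_k \,=\, \sum_{l\in\Z^n}\,(-2i)\,a_{\a,l}\,\sin(\tfrac12 l.\Th k)\,U_{l+k}.$$
The crucial observation is that the coefficient of $U_k$ in this expansion, corresponding to $l=0$, vanishes because $\sin 0 = 0$. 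Hence
$$\< U_k\,,\,\wt A_\a\, U_k\> \,=\, 0 \quad\text{for all } k\in \Z^n,\ \a\in\{1,\dots,n\}.$$

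From $\DD(U_k\otimes e_j)=k_\mu U_k\otimes \ga^\mu e_j$ and $|D|^{-t-s}(U_k\otimes e_j)=|k|^{-t-s}U_k\otimes e_j$ for $k\neq 0$ (with the kernel contribution harmless since $\DD$ annihilates it), I obtain for $\Re(s)$ large
$$\Tr\bigl(\wt A \DD\,|D|^{-t-s}\bigr) \,=\, \sum_{k,j,\a,\b}\,(-i)\,k_\b\,|k|^{-t-s}\,\< U_k,\wt A_\a U_k\>\,\< e_j,\ga^\a\ga^\b e_j\> \,=\, 0,$$
since every inner product $\< U_k,\wt A_\a U_k\>$ is zero. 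For the second identity, one writes $\DD \wt A (U_k\otimes e_j) = -i\sum_l c_l(k)(l+k)_\b U_{l+k}\otimes \ga^\b\ga^\a e_j$ with $c_l(k) = -2i\,a_{\a,l}\sin(\tfrac12 l.\Th k)$, and only the $l=0$ term survives the pairing against $U_k$, yielding the factor $c_0(k)=0$. Hence $\Tr\bigl(\DD \wt A\,|D|^{-t-s}\bigr)=0$ as well.

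There is no real obstacle here beyond the bookkeeping: the proof only relies on the elementary fact that $\wt A_\a$, being an inner commutator with an element of the algebra, has vanishing diagonal matrix elements in the Weyl basis. In particular, no Diophantine hypothesis on $\Th$ is needed, since all sums collapse term-by-term before any Diophantine estimate would enter.
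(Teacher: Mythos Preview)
Your proof is correct and follows essentially the same approach as the paper: compute the trace in the Weyl basis $U_k\otimes e_j$ and observe that the diagonal matrix element $\langle U_k,\wt A_\a U_k\rangle = \langle U_k,[A_\a,U_k]\rangle$ vanishes because only the $l=0$ term of the commutator expansion could contribute and $\sin 0=0$. The paper's argument is written more tersely using the $\sim_c$ notation, but the substance is identical, including your correct remark that no Diophantine condition is needed here.
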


\begin{proof}
Using notations of (\ref{connection}), we have
\begin{align*}
\Tr (\wt A \DD |D|^{-s})&\sim_c {\sum}_j {\sum}'_k \langle U_k\otimes
e_j,-i k_\mu|k|^{-s}
[A_\a,U_k] \otimes \ga^\a \ga^\mu e_j \rangle \\
&\sim_c -i\Tr(\ga^\a\ga^\mu) \, {\sum}'_k k_\mu
|k|^{-s} \langle U_k,[A_\a,U_k] \rangle=0
\end{align*}
since $\langle U_k,[A_\a,U_k] \rangle = 0$. Similarly
\begin{align*}
\Tr ( \DD \wt A  |D|^{-s})&\sim_c {\sum}_j {\sum}'_k \langle
U_k\otimes
e_j,|k|^{-s}{\sum}_l a_{\a,l}\,2 \sin \tfrac{k. \Th l}{2} (l+k)_\mu
U_{l+k}
\otimes \ga^\mu \ga^\a e_j \rangle\\
&\sim_c 2\Tr(\ga^\mu \ga^\a){\sum}'_k {\sum}_l a_{\a,l}\sin \tfrac{k.
\Th
l}{2}\,(l+k)_\mu \,|k|^{-s}\langle U_k,U_{l+k} \rangle =0.
\tag*{\qed}
\end{align*}
\hideqed
\end{proof}

Any element $h$ in the algebra generated by $\A$ and $[\DD,\A]$ can
be written as a
linear combination of terms of the form ${a_1}^{p_1}\cdots
{a_n}^{p_r}$ where
$a_i$ are elements of $\A$ or $[\DD,\A]$. Such a term can be written
as a series $b:=\sum a_{1,\a_1,l_1}\cdots a_{q,\a_q,l_q} U_{l_1}\cdots
U_{l_q} \otimes \ga^{\a_1}\cdots \ga^{\a_q}$ where
$a_{i,\a_i}$ are Schwartz sequences and when $a_i=:\sum_l a_l U_l \in
\A$, we set $a_{i,\a,l}=a_{i,l}$ with $\ga^\a =1$. We define
$$
L(b):= \tau \big({\sum}_l a_{1,\a_1,l_1}\cdots a_{q,\a_q,l_q} U_{l_1}
\cdots U_{l_q}\big) \Tr (\ga^{\a_1}\cdots \ga^{\a_q}).
$$
By linearity, $L$ is defined as a linear form on the whole algebra
generated by $\A$
and $[\DD,\A]$.

\begin{lemma}
\label{tracehD}
If $h$ is an element of the algebra generated by $\A$ and $[\DD,\A]$,
$$
\Tr \big(h |D|^{-s}\big) \sim_c L(h)\,  Z_n(s).
$$
In particular, $\Tr \big(h |D|^{-s}\big)$ has at most one pole at
$s=n$.
\end{lemma}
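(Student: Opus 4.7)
The plan is as follows. First, by the linearity of both $h\mapsto L(h)$ and $h\mapsto \Tr(h|D|^{-s})$, it suffices to treat a monomial $h=a_1a_2\cdots a_q$ with each $a_i$ either in $\A$ or of the form $[\DD,c_i]$ for some $c_i\in\A$. Since elements of $\A$ act as $L(\cdot)\otimes 1_{2^m}$ and $[\DD,L(c)\otimes 1_{2^m}]=-iL(\delta_{\mu}c)\otimes\ga^{\mu}$, each factor $a_i$ has the shape $L(b_i)\otimes\Gamma_i$ with $b_i\in\A$ and $\Gamma_i$ equal either to $1_{2^m}$ or to $-i\ga^{\mu_i}$. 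The representation $L$ being an algebra morphism commuting with the Clifford factors, the product collapses to $h=L(b)\otimes\Gamma$ with $b:=b_1\cdots b_q\in\A$ and $\Gamma:=\Gamma_1\cdots\Gamma_q$; by the very definition of $L$ given just before the lemma, $L(h)=\tau(b)\Tr(\Gamma)$.

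Second, I would decompose the trace on the orthonormal basis $\set{U_k\otimes e_j \ :\ k\in\Z^n,\ 1\leq j\leq 2^m}$ of $\H$. From $\DD(U_k\otimes e_j)=k_\mu U_k\otimes\ga^\mu e_j$ and the Clifford relations one obtains $\DD^2(U_k\otimes e_j)=|k|^2 U_k\otimes e_j$, so $|\DD|^{-s}(U_k\otimes e_j)=|k|^{-s}U_k\otimes e_j$ for $k\ne 0$; while $|D|(U_0\otimes e_j)=P_0(U_0\otimes e_j)=U_0\otimes e_j$ since $|D|=|\DD|+P_0$ and $U_0\otimes e_j\in\Ker\DD$. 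Hence the contribution of $k=0$ to $\Tr(h|D|^{-s})$ is independent of $s$ and disappears modulo $\sim_c$, leaving
\begin{equation*}
\Tr(h|D|^{-s})\sim_c {\sum_{k}}'\,|k|^{-s}\sum_{j=1}^{2^m}\bigl\langle U_k\otimes e_j,\,(L(b)\otimes\Gamma)(U_k\otimes e_j)\bigr\rangle.
\end{equation*}

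Third, I would compute the diagonal matrix element and show it is $k$-independent. Expanding $b=\sum_l b_l U_l$ with Schwartz coefficients and using $U_l U_k=e^{-\frac{i}{2}l\cdot\Th k}U_{l+k}$ together with $\langle U_k,U_{l+k}\rangle=\delta_{l,0}$, one gets $\langle U_k,L(b)U_k\rangle=b_0=\tau(b)$ for every $k\neq 0$ (the Moyal phase vanishes because $l=0$ is the only surviving term). Summing over $j$ yields $\Tr(\Gamma)\,\tau(b)=L(h)$, which pulled out of the $k$-sum leaves exactly $Z_n(s)={\sum}'_k|k|^{-s}$, so $\Tr(h|D|^{-s})\sim_c L(h)\,Z_n(s)$.

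The last assertion follows immediately, since $Z_n$ is a standard Epstein-type zeta function, meromorphic on $\C$ with a unique simple pole at $s=n$. The argument is in essence a direct calculation; the only mild bookkeeping point is isolating the $k=0$ sector produced by the replacement $|D|=|\DD|+P_0$, which fortunately contributes only an $s$-independent constant and is therefore absorbed by $\sim_c$.
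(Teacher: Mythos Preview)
Your proof is correct and follows essentially the same approach as the paper: reduce by linearity to a monomial, factor it as $L(b)\otimes\Gamma$, expand the trace over the basis $U_k\otimes e_j$, and use that $\langle U_k,L(b)U_k\rangle=\tau(b)$ is independent of $k$ to pull out $L(h)$ and leave $Z_n(s)$. Your treatment is in fact slightly more explicit than the paper's (you spell out the $k=0$ contribution from $P_0$ and why it is absorbed by $\sim_c$, which the paper leaves implicit), but the argument is the same.
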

\begin{proof} We get with $b$ of the form $\sum
a_{1,\a_1,l_1}\cdots a_{q,\a_q,l_q} U_{l_1}\cdots U_{l_q} \otimes
\ga^{\a_1}\cdots
\ga^{\a_q}$,
\begin{align*}
\Tr\big(b|D|^{-s}\big)&\sim_c {\sum_{k\in\Z^n}}' \langle  U_k, \sum_l
a_{1,\a_1,l_1}\cdots a_{q,\a_q,l_q} U_{l_1}\cdots U_{l_q}U_k \rangle
\, \Tr
(\ga^{\a_1}\cdots \ga^{\a_q})|k|^{-s} \\
&\sim_c \tau(\sum_l a_{1,\a_1,l_1}\cdots a_{q,\a_q,l_q} U_{l_1}\cdots
U_{l_q})\Tr(\ga^{\a_1}\cdots \ga^{\a_q})\, Z_n(s) =L(b) \,Z_n(s).
\end{align*}
The results follows now from linearity of the trace.
\end{proof}

\begin{lemma}
\label{traceJAJA}
If $\tfrac{1}{2\pi}\Th$ is diophantine, the function
$s\mapsto\Tr \big( \eps JAJ^{-1} A |D|^{-s} \big)$ extends
meromorphically on the whole plane with only one
possible pole at $s=n$. Moreover, this pole is simple and
$$
\underset{s=n}{\Res}\, \Tr \big(\eps JAJ^{-1} A |D|^{-s}\big) =
a_{\a,0}\,a^\a_{0}\
2^{m+1}\pi^{n/2}\,\Ga(n/2)^{-1}.
$$
\end{lemma}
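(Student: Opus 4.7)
The plan is to reduce the trace to an explicit Fourier-indexed series, isolate the $l=0$ Fourier mode as the unique source of the pole at $s=n$, and invoke the zeta function machinery of Section 4 to handle the remaining terms under the Diophantine hypothesis.

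First, using $A=L(-iA_\a)\otimes\ga^\a$ with $A_\a^*=-A_\a$ and the identity $\eps J(L(a)\otimes\ga^\a)J^{-1}=-R(a^*)\otimes\ga^\a$ recorded earlier, one obtains $\eps JAJ^{-1}=R(iA_\a)\otimes\ga^\a$, whence (using $[L,R]=0$)
\[
\eps JAJ^{-1}\,A\;=\;L(A_\b)\,R(A_\a)\otimes\ga^\a\ga^\b.
\]
Computing the trace in the spectral basis $(U_k\otimes e_j)_{k,j}$, expanding $A_\a=\sum_l a_{\a,l}U_l$, and iterating \eqref{rel1} to get $U_m U_k U_l=e^{-\tfrac i2(m\Th k+(m+k)\Th l)}U_{m+k+l}$, the orthogonality $\<U_k,U_{m+k+l}\>=\delta_{m+l,0}$ forces $m=-l$, while antisymmetry of $\Th$ together with $l\Th l=0$ collapses the phases to $e^{-i\,k\Th l}$. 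Using $\Tr(\ga^\a\ga^\b)=2^m\delta^{\a\b}$ and discarding the constant-in-$s$ contribution of the zero mode $k=0$ (which arises since $|D|=|\DD|+P_0$),
\[
\Tr\big(\eps JAJ^{-1}A\,|D|^{-s}\big)\;\sim_c\;2^m\sum_{l\in\Z^n} a_{\a,-l}\,a^\a_l\;{\sum_k}'e^{-i\,k\Th l}\,|k|^{-s}.
\]

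Next, I would split this sum into $l=0$ and $l\neq 0$. The $l=0$ piece equals $2^m\,a_{\a,0}\,a^\a_0\,Z_n(s)$ and, since $Z_n(s)={\sum_k}'|k|^{-s}$ has a simple pole at $s=n$ with residue $2\pi^{n/2}/\Ga(n/2)$ (already used in Proposition \ref{zeta(0)}), produces exactly the announced residue $a_{\a,0}a^\a_0\,2^{m+1}\pi^{n/2}\Ga(n/2)^{-1}$. For $l\neq 0$, each inner series ${\sum_k}'e^{-ik\Th l}|k|^{-s}$ is a twisted Epstein--Hurwitz zeta function whose oscillating phase $\Th l/(2\pi)$ is controlled by the Diophantine hypothesis on $\tfrac{1}{2\pi}\Th$; then the zeta function results of Section 4 (Theorems \ref{zetageneral} and \ref{analytic}, applied as in Lemma \ref{spectrumset}) combine these with the Schwartz decay of $\{a_{\a,l}\}_l$ to yield a meromorphic continuation of the full $l\neq 0$ series to $\C$ with poles confined to $\Z$ and, crucially, no pole at $s=n$.

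The main obstacle lies exactly in this last step: converting the Diophantine lower bound on ${}^t\Th u$ into $s$-uniform control of the off-diagonal twisted zeta functions so that, after weighting by the rapidly decaying Fourier coefficients and summing over $l\neq 0$, no residue survives at $s=n$. This is precisely what the series-of-zeta-functions machinery of Section 4 is designed to deliver; once granted, the diagonal $l=0$ term is the sole contributor and the stated residue formula follows immediately.
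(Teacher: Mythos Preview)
Your approach is essentially the paper's own: compute $\eps JAJ^{-1}A$ as a left-right product, expand the trace as $2^m\sum_l a_{\a,l}a^\a_{-l}\,{\sum_k}'e^{ik\cdot\Th l}|k|^{-s}$, and invoke Theorem~\ref{analytic} under the Diophantine hypothesis on $\tfrac1{2\pi}\Th$. One imprecision: you say the $l\neq 0$ series has ``poles confined to $\Z$ and no pole at $s=n$,'' but that would not suffice for the lemma's conclusion that the full trace has \emph{only one possible pole}, at $s=n$; what you actually need (and what Theorem~\ref{analytic}\,(iii) gives, with $P=1$, $q=1$) is that the $l\neq 0$ sum extends \emph{holomorphically} to all of $\C$, so invoke part~(ii) or~(iii) of that theorem directly rather than the more general Theorem~\ref{zetageneral} or Lemma~\ref{spectrumset}.
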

\begin{proof} With $A=L(-i A_\a)\otimes \ga^\a$, we get
$\epsilon J A J^{-1}=R(i A_\a)\otimes \ga^\a$, and by multiplication
$\eps JAJ^{-1} A=R(A_\beta)
L(A_\a)\otimes \ga^{\beta}\ga^\a$. Thus,
\begin{align*}
\Tr\big(\eps JAJ^{-1} A |D|^{-s}\big)&\sim_c {\sum_{k\in\Z^n}}'
\langle U_k,A_\a
U_k A_\beta \rangle \,|k|^{-s}\Tr (\ga^{\beta}\ga^{\a}) \\
&\sim_c{\sum_{k\in\Z^n}}' \,\sum_{l}
a_{\a,l}\,a_{\beta,-l}\,e^{ik.\Th l}\,
|k|^{-s}\Tr (\ga^{\beta}\ga^{\a})\\
&\sim_c 2^m {\sum_{k\in\Z^n}}' \, \sum_{l}
a_{\a,l}\,a^\a_{-l}\,e^{ik.\Th l}\,
|k|^{-s}.
\end{align*}
Theorem \ref{analytic} $(ii)$ entails that ${\sum}'_{k\in\Z^n} \,
\sum_{l} a_{\a,l} \,a^{\a}_{-l}\,e^{ik.\Th l}\, |k|^{-s}$ extends
meromorphically
on the whole plane $\C$ with only one possible pole at $s=n$.
Moreover, this pole is
simple and we have
$$
\underset{s=n}{\Res}\, {\sum_{k\in\Z^n}}' \,\sum_{l} a_{\a,l}
\,a^\a_{-l}\,e^{ik.\Th l}\,
|k|^{-s} =  a_{\a,0}\,a^\a_{0} \, \underset{s=n}{\Res}\, Z_n(s).
$$
Equation (\ref{formule}) now gives the result.
\end{proof}

\begin{lemma}
    \label{traceXD}
If $\tfrac{1}{2\pi}\Th$ is diophantine, then for any $t\in \R$,
$$
\ncint X|D|^{-t} = \delta_{t,n}\, 2^{m+1}\big(-\sum_l
a_{\a,l}\,a^\a_{-l}+
\,a_{\a,0}\,a^\a_{0}\big)\ 2\pi^{n/2}\,\Ga(n/2)^{-1}  .
$$
where $X=\wt A\DD + \DD \wt A + {\wt A}^2$ and $A=:-i\sum_{l}
a_{\a,l}\,U_l\otimes \ga^\a$.
\end{lemma}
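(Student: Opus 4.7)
The strategy is to reduce $\ncint X|D|^{-t}$ to $\ncint \wt A^2 |D|^{-t}$ and then expand $\wt A^2$ in terms involving only $A$'s and $JAJ^{-1}$'s, applying Lemmas \ref{tracehD} and \ref{traceJAJA} to each piece. First, since $\wt A \DD$ and $\DD \wt A$ are both handled by Lemma \ref{traceAD}, which gives $\ncint \wt A \DD\,|D|^{-t}=\ncint \DD \wt A\,|D|^{-t}=0$ on the noncommutative torus, we immediately reduce to
\[
\ncint X|D|^{-t} = \ncint \wt A^2 |D|^{-t}.
\]

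Next I would expand $\wt A^2 = A^2 + A(\eps JAJ^{-1}) + (\eps JAJ^{-1})A + (\eps JAJ^{-1})^2$ and treat each summand. The term $A^2$ lies in the algebra generated by $\A$ and $[\DD,\A]$, so Lemma \ref{tracehD} applies: $\Tr(A^2|D|^{-s}) \sim_c L(A^2)\, Z_n(s)$, which has a unique simple pole at $s=n$. A direct computation gives $L(A^2) = -2^m \sum_l a_{\alpha,l}\,a^{\alpha}_{-l}$ (using $\tau(U_l U_{l'})=\delta_{l+l',0}$ and $\Tr(\ga^\alpha \ga^\beta)=2^m\delta^{\alpha\beta}$), so together with $\Res_{s=n}Z_n(s)=2\pi^{n/2}/\Ga(n/2)$ this yields the contribution at $t=n$. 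The square $(\eps JAJ^{-1})^2=JA^2J^{-1}$: by Lemma \ref{adjoint} we have $\ncint JA^2J^{-1}|D|^{-t}=\overline{\ncint A^2|D|^{-t}}$, and since $A$ is selfadjoint with $A_\a^*=-A_\a$ the scalar $\sum_l a_{\a,l}a^{\a}_{-l}$ is real, so this contribution equals that of $A^2$. The two mixed terms $A(\eps JAJ^{-1})$ and $(\eps JAJ^{-1})A$ have equal noncommutative integrals by the trace property (Proposition \ref{tracenc}), and each is computed by Lemma \ref{traceJAJA}, producing a simple pole at $s=n$ with residue $a_{\a,0}\,a^{\a}_0\,2^{m+1}\pi^{n/2}\Ga(n/2)^{-1}$.

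Adding these four contributions, I get $\ncint \wt A^2 |D|^{-n} = 2\ncint A^2|D|^{-n} + 2\ncint \eps JAJ^{-1} A |D|^{-n}$, which is
\[
-2^{m+1}\sum_l a_{\a,l} a^{\a}_{-l}\,\tfrac{2\pi^{n/2}}{\Ga(n/2)} + 2^{m+2}\,a_{\a,0} a^{\a}_0\,\tfrac{\pi^{n/2}}{\Ga(n/2)} = 2^{m+1}\Bigl(-\sum_l a_{\a,l} a^{\a}_{-l} + a_{\a,0} a^{\a}_0\Bigr)\tfrac{2\pi^{n/2}}{\Ga(n/2)},
\]
exactly the claimed value. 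For $t\neq n$ each of the four trace functions has its only potential pole at $s=n$, so shifting $s\mapsto s+t$ shows the residue at $s=0$ vanishes, giving the $\delta_{t,n}$ factor.

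The main obstacle is controlling the mixed term $\ncint \eps JAJ^{-1}\,A\,|D|^{-t}$, because unlike $\ncint A^2|D|^{-t}$ it involves a sum of the form $\sum_l a_{\a,l} a^{\a}_{-l}\sum_{k}' e^{i k.\Th l}|k|^{-s}$, whose meromorphic continuation and residue at $s=n$ are exactly what require the Diophantine assumption on $\tfrac{1}{2\pi}\Th$ (via Theorem \ref{analytic}(ii), invoked in the proof of Lemma \ref{traceJAJA}). Everything else is linear-algebraic bookkeeping and applications of the trace property together with Lemma \ref{tracehD}.
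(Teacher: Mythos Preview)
The proposal is correct and follows essentially the same approach as the paper: both reduce to $\ncint \wt A^2|D|^{-t}$ via Lemma~\ref{traceAD}, decompose $\wt A^2$ into $A^2$, $JA^2J^{-1}$, and the two cross terms, and handle the pieces with Lemmas~\ref{tracehD} and~\ref{traceJAJA} together with $J|D|=|D|J$. Your use of the trace property to equate the two cross terms (in place of the paper's assertion that $A$ and $\eps JAJ^{-1}$ commute) is a harmless variant---in fact the equality of the cross contributions is immediate from the symmetry $\Tr(\ga^\a\ga^\b)=\Tr(\ga^\b\ga^\a)$ in the explicit computation behind Lemma~\ref{traceJAJA}, so no subtlety about whether $|D|^{-t}\in\Psi(\A)$ for non-integer $t$ is actually needed.
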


\begin{proof} By Lemma \ref{traceAD},
we get $\ncint X|D|^{-t}=\Res_{s=0} \Tr({\wt A}^2 |D|^{-s-t})$. Since
$A$ and
$\eps JAJ^{-1}$ commute, we have $\wt A ^2 = A^2 + JA^2J^{-1} + 2\eps
JAJ^{-1}A$.
Thus,
$$
\Tr({\wt A}^2 |D|^{-s-t})=\Tr( A^2 |D|^{-s-t})+\Tr( JA^2J^{-1}
|D|^{-s-t})+2\Tr ( \eps JAJ^{-1}A |D|^{-s-t}).
$$
Since $|D|$ and $J$
commute, we have with Lemma \ref{tracehD},
$$
 \Tr \big({\wt A}^2 |D|^{-s-t}\big)\sim_c 2L (A^2) \, Z_n(s+t) +
 2 \Tr \big(\eps JAJ^{-1}A |D|^{-s-t}\big).
$$
Thus Lemma \ref{traceJAJA} entails that $\Tr({\wt A}^2 |D|^{-s-t})$
is holomorphic at 0 if $t\neq n$. When $t=n$,
\begin{equation}
    \label{TrA^2}
\underset{s=0}{\Res}\, \Tr\big({\wt A}^2 |D|^{-s-t}\big) =
2^{m+1}\big(-\sum_l a_{\a,l} \, a^\a_{-l}+
\,a_{\a,0}\,a^\a_{0}\ \big)\, 2\pi^{n/2}\,\Ga(n/2)^{-1},
\end{equation}
which gives the result.
\end{proof}

\begin{lemma}
\label{traceAA}
If $\tfrac{1}{2\pi}\Th$ is diophantine, then
$$
\ncint \wt A \DD \wt A \DD |D|^{-2-n}=-\tfrac{n-2}{n}
\ncint \wt A^2 |D|^{-n}.
$$
\end{lemma}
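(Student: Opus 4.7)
\medskip

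\noindent\textbf{Proof proposal.} The plan is a direct computation of both noncommutative integrals by evaluating the operators on the orthonormal basis $\{U_k\otimes e_j\}$, then comparing residues term by term.

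First, write $A=L(-iA_\alpha)\otimes\gamma^\alpha$ with $A_\alpha=\sum_l a_{\alpha,l}U_l$, so that $\wt A=-i\wt A_\alpha\otimes\gamma^\alpha$ with $\wt A_\alpha=L(A_\alpha)-R(A_\alpha)$ and $\wt A_\alpha U_k=[A_\alpha,U_k]=-2i\sum_l a_{\alpha,l}\sin(\tfrac12 l.\Th k)U_{l+k}$. Combined with $\DD(U_k\otimes e_j)=k_\mu U_k\otimes\gamma^\mu e_j$, iterated application gives, after taking the $U_k$-coefficient (which forces $l'+l=0$ and uses $l.\Th l=0$),
$$
\Tr(\wt A\DD\wt A\DD|D|^{-2-n-s})\sim_c -4\sum_{k\neq0,\,l}a_{\beta,l}a_{\alpha,-l}\sin^2(\tfrac12 l.\Th k)\,k_\mu(l+k)_\nu|k|^{-2-n-s}\Tr(\gamma^\alpha\gamma^\nu\gamma^\beta\gamma^\mu).
$$

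Next, expand the gamma-trace via $\Tr(\gamma^\alpha\gamma^\nu\gamma^\beta\gamma^\mu)=2^m(\delta^{\alpha\nu}\delta^{\beta\mu}-\delta^{\alpha\beta}\delta^{\nu\mu}+\delta^{\alpha\mu}\delta^{\nu\beta})$, giving three summands. In each, $k\to-k$ symmetry (the factors $\sin^2$, $\cos$, and $|k|^{-s}$ are even) kills the cross terms linear in $k$, leaving only contributions proportional to $k_\alpha k_\beta$ and $|k|^2$. Writing $\sin^2(\tfrac12 l.\Th k)=\tfrac12(1-\cos(l.\Th k))$, the diophantine hypothesis on $\tfrac{1}{2\pi}\Th$ implies, via Theorem~\ref{analytic}, that the oscillatory Epstein sums $\sum_{k\neq0}k_\alpha k_\beta\cos(l.\Th k)|k|^{-n-2-s}$ and $\sum_{k\neq0}\cos(l.\Th k)|k|^{-n-s}$ are holomorphic at $s=0$ for $l\neq0$. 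Hence the residue contributions of the oscillatory pieces collapse to their $l=0$ values, and I can use the symmetric Epstein identity $\sum_{k\neq0}k_\alpha k_\beta|k|^{-n-2-s}=\tfrac{\delta^{\alpha\beta}}{n}Z_n(n+s)$, with $\underset{s=0}{\Res}\,Z_n(n+s)=2\pi^{n/2}/\Ga(n/2)$.

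Summing the three gamma-trace contributions and collecting factors, the $\delta^{\alpha\beta}\delta^{\nu\mu}$ term produces $2^{m+2}\pi^{n/2}/\Ga(n/2)\,(\sum_l a_{\alpha,l}a^\alpha_{-l}-a_{\alpha,0}a^\alpha_0)$ while the two $k_\alpha k_\beta$-type terms together produce $-\tfrac{2}{n}$ times the same quantity times $2^{m+2}\pi^{n/2}/\Ga(n/2)$. Their sum is
$$
\ncint\wt A\DD\wt A\DD|D|^{-2-n}=\tfrac{n-2}{n}\,2^{m+2}\pi^{n/2}/\Ga(n/2)\,\Big(\sum_l a_{\alpha,l}a^\alpha_{-l}-a_{\alpha,0}a^\alpha_0\Big),
$$
which compared with formula \eqref{TrA^2}, namely $\ncint\wt A^2|D|^{-n}=-2^{m+2}\pi^{n/2}/\Ga(n/2)(\sum_l a_{\alpha,l}a^\alpha_{-l}-a_{\alpha,0}a^\alpha_0)$, yields the claimed identity. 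The main obstacle is the bookkeeping for the three gamma-trace contributions and the careful separation of the $l=0$ residue part from the oscillatory $l\neq0$ part, which is precisely where the diophantine condition on $\Th$ plays its essential role.
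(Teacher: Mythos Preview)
Your proof is correct. The approach differs from the paper's only in organization: the paper splits $\wt A = A + \eps JAJ^{-1}$ and, using $\DD J = \eps J\DD$, reduces the computation to the two pieces $\ncint A\DD A\DD|D|^{-2-n}$ (no phase in $k$) and $\ncint \eps JAJ^{-1}\DD A\DD|D|^{-2-n}$ (phase $e^{-il\cdot\Th k}$), then combines them and applies the contraction $\Tr(\ga^{\alpha_2}\ga^\mu\ga^{\alpha_1}\ga_\mu)=2^m(2-n)\delta^{\alpha_2\alpha_1}$ directly. You instead act with $\wt A_\alpha = L(A_\alpha)-R(A_\alpha)$ as a commutator on $U_k$, which packages the same four pieces into a single expression carrying the factor $\sin^2(\tfrac12 l\cdot\Th k)=\tfrac12(1-\cos(l\cdot\Th k))$; your non-oscillatory $\tfrac12$ corresponds to the paper's $A\DD A\DD$ and $(\eps JAJ^{-1})\DD(\eps JAJ^{-1})\DD$ terms, and your $-\tfrac12\cos$ to the two cross terms. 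Expanding the gamma trace into three Kronecker pieces is equivalent to the paper's single $\ga^\mu(\cdot)\ga_\mu$ contraction. Both routes land on the same Epstein residue $\sum_k' k_\mu k_\nu|k|^{-n-2-s}=\tfrac{\delta_{\mu\nu}}{n}Z_n(n+s)$ and the same bracket $\sum_l a_{\alpha,l}a^\alpha_{-l}-a_{\alpha,0}a^\alpha_0$.

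One phrasing point: when you say the oscillatory sums are holomorphic ``for $l\neq0$'' and ``hence'' the residue collapses to $l=0$, the second step is not automatic from the first (term-by-term holomorphy does not pass to infinite sums). What actually justifies the collapse is part $(iii)$ of Theorem~\ref{analytic}, which states directly that $\sum_{l\neq0} b(l)f_{\frac{1}{2\pi}\Th l}(s)$ extends holomorphically under the diophantine hypothesis. Since you cite that theorem, the argument is sound; just be aware that this is where the uniform diophantine estimate does its real work.
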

\begin{proof}
With $\DD J = \eps J \DD$, we get
$$
\ncint \wt A \DD \wt A \DD |D|^{-2-n} = 2 \ncint  A \DD  A \DD
|D|^{-2-n} + 2
\ncint \eps JAJ^{-1} \DD A \DD |D|^{-2-n}.
$$
Let us first compute $\ncint  A \DD  A \DD |D|^{-2-n}$. We have, with
$A=:-i
L(A_\a)\otimes \ga^\a=: -i\sum_{l} a_{\a,l} U_l \otimes \ga^\a$,
$$
\Tr \big(A\DD A \DD |D|^{-s-2-n}\big) \sim_c -{{\sum_{k}}}'
\sum_{l_1,l_2}
a_{\a_2,l_2}\,a_{\a_1,l_1}
\,\tau(U_{-k} U_{l_2} U_{l_1} U_k) \,
\tfrac{k_{\mu_1}(k+l_1)_{\mu_2}}{|k|^{s+2+n}}
\Tr(\ga^{\a,\mu})
$$
where $\ga^{\a,\mu}:= \ga^{\a_2}\ga^{\mu_2}  \ga^{\a_1}\ga^{\mu_1}$.
Thus,
$$
\ncint  A \DD  A \DD |D|^{-2-n} = -\sum_{l} a_{\a_2,-l}\,a_{\a_1,l} \,
\underset{s=0}{\Res}\,\big({{\sum_{k}}}'
\tfrac{k_{\mu_1}k_{\mu_2}}{|k|^{s+2+n}} \big)
\Tr(\ga^{\a,\mu}).
$$
We have also, with $\eps JAJ^{-1} = iR(A_\a)\otimes \ga^{a}$,
$$
\Tr \big(\eps JAJ^{-1}\DD A \DD |D|^{-s-2-n}\big) \sim_c
{{\sum_{k}}}' \sum_{l_1,l_2}
a_{\a_2,l_2}a_{\a_1,l_1} \tau(U_{-k} U_{l_1} U_k U_{l_2})
\tfrac{k_{\mu_1}(k+l_1)_{\mu_2}}{|k|^{s+2+n}} \Tr(\ga^{\a,\mu}).
$$
which gives
$$
\ncint  \eps JAJ^{-1} \DD  A \DD |D|^{-2-n} =a_{\a_2,0}a_{\a_1,0} \,
\underset{s=0}{\Res}\,\big({{\sum_{k}}}'
\tfrac{k_{\mu_1}k_{\mu_2}}{|k|^{s+2+n}} \big)
\Tr(\ga^{\a,\mu}).
$$
Thus,
$$
\half \ncint \wt A \DD \wt A \DD |D|^{-2-n} =
\big(a_{\a_2,0}a_{\a_1,0}-\sum_{l}
a_{\a_2,-l}a_{\a_1,l}\big) \Res_{s=0}\big({{\sum_{k}}}'
\tfrac{k_{\mu_1}k_{\mu_2}}{|k|^{s+2+n}} \big) \Tr(\ga^{\a,\mu}).
$$
With ${\sum}'_k \tfrac{k_{\mu_1}k_{\mu_2}}{|k|^{s+2+n}} =
\tfrac{\delta_{\mu_1\mu_2}}{n}Z_n(s+n)$ and
$C_n:=\Res_{s=0} Z_n(s+n) = 2\pi^{n/2} \Ga(n/2)^{-1}$ we obtain
$$
\half \ncint \wt A \DD \wt A \DD |D|^{-2-n} =
\big(a_{\a_2,0}a_{\a_1,0}-\sum_{l}
a_{\a_2,-l}a_{\a_1,l}\big)\tfrac{C_n}{n}
\Tr(\ga^{\a_2}\ga^{\mu}\ga^{\a_1}\ga_\mu).
$$
Since  $\Tr(\ga^{\a_2}\ga^{\mu}\ga^{\a_1}\ga_\mu)=
2^m(2-n)\delta^{\a_2,\a_1}$,
we get
$$
\half \ncint \wt A \DD \wt A \DD |D|^{-2-n} =
2^m\big(-a_{\a,0}\,a^\a_0+\sum_{l}
a_{\a,-l}\,a^\a_l\big)\tfrac{C_n (n-2)}{n}.
$$
Lemma \ref{traceXD} now proves the result.
\end{proof}

\begin{lemma}
\label{ncint-odd-pdo}
If $\tfrac{1}{2\pi}\Th$ is diophantine, then
for any $P\in \Psi_{1}(\A)$ and $q \in \N$, $q$ odd,
$$
\ncint P |D|^{-(n-q)} = 0.
$$
\end{lemma}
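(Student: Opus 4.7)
The plan is to reduce everything to a single monomial and exploit two parity arguments in sequence: the vanishing of the spinor trace for an odd number of $\gamma$-matrices, and the fact that the Epstein zeta function $Z_n$ has a single pole, shifted by an even amount whenever the argument comes from an even-degree $k$-moment. First, I would use the definition of $\Psi_1(\A)$ to write $P \equiv B\,D^{-2p}$ modulo $OP^{-\infty}$ with $B \in \DD_1(\A)$; since $OP^{-\infty}$ contributes nothing to $\ncint$ and by linearity, I may take $B$ to be a single monomial in $\A$, $J\A J^{-1}$ and $\DD$, with $j$ occurrences of $\DD$. Expanding each $\DD = -i\,\delta_{\mu}\otimes\gamma^{\mu}$ gives $B = \sum_{\mu}T_{\mu}\otimes\gamma^{\mu_1}\cdots\gamma^{\mu_j}$ where every $T_{\mu}$ is spin-trivial. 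Because $\DD^2$ (hence $D^{-2p}$ and $|D|^{-s}$) acts trivially on the $\C^{2^{m}}$-factor, the trace factorizes:
$$
\Tr\bigl(B\,D^{-2p}|D|^{-s-(n-q)}\bigr) \;=\; \sum_{\mu_1,\dots,\mu_j}\,\Tr_{\H_{\tau}}\!\bigl(T_{\mu}\,D^{-2p}|D|^{-s-(n-q)}\bigr)\,\Tr\bigl(\gamma^{\mu_1}\cdots\gamma^{\mu_j}\bigr).
$$

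If $j$ is odd, the Dirac trace $\Tr(\gamma^{\mu_1}\cdots\gamma^{\mu_j})$ vanishes identically, and the contribution is zero, so I may assume $j$ is even. In that case, repeating the computation of Lemma \ref{spectrumset} (cleaner here since $B\in\DD_1(\A)$ has no $|\DD|^{p_i}$-factors), I would write
$$
\Tr_{\H_{\tau}}\!\bigl(T_{\mu}\,D^{-2p}|D|^{-s-(n-q)}\bigr) \;=\; \sum_{l\in\Z^{n}}\kappa_{l}\,{\sum_{k\in\Z^{n}}}'\,\frac{k_{\alpha_1}\cdots k_{\alpha_r}\,e^{ik.\Th l}}{|k|^{s+(n-q)+2p}}\,,
$$
where $\{\kappa_{l}\}\in\SS(\Z^n)$ packages the Fourier coefficients of the $\A$- and $J\A J^{-1}$-factors and $(\alpha_1,\dots,\alpha_r)$ is the subtuple of $(\mu_1,\dots,\mu_j)$ corresponding to those $\delta_{\mu_i}$'s that eventually land on $U_{k}$. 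Multiplying by $\Tr(\gamma^{\mu_1}\cdots\gamma^{\mu_j})$ and summing over $\mu$ contracts the free spinor indices pairwise through $\delta^{\mu_a\mu_b}$, so the resulting tensor in $k$ is a polynomial of degree $r$.

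The key parity step comes next. By the Diophantine hypothesis on $\tfrac{1}{2\pi}\Th$ and Theorem \ref{analytic}, every series with $l\neq 0$ extends holomorphically through $s=0$, so only the $l=0$ summand can carry a residue there. For $l=0$, the $k\to -k$ symmetry of the sum forces odd-degree $k$-monomials to vanish in the analytic continuation, leaving only the terms with $r=2r'$ even; after the $\delta^{\mu\nu}$-contractions each such term is a scalar multiple of $Z_{n}\bigl(s+(n-q)+2p-2r'\bigr)$, whose unique pole sits at $s=q-2p+2r'$. Requiring this pole to land at $s=0$ forces $q = 2p-2r'$, which is even, contradicting the oddness of $q$. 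Consequently every surviving series is holomorphic at $s=0$, and $\ncint P|D|^{-(n-q)}=0$. The hard point, and the only real technical obstacle, is controlling the $l\neq 0$ oscillatory series and the odd-degree $k$-moments of the $l=0$ piece in the analytic continuation; both are furnished by Theorem \ref{analytic}, so no further estimates are needed.
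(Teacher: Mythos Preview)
Your approach is essentially the paper's: reduce to a monomial $B\in\DD_1(\A)$, compute the trace as a sum over $k$ and $(l,l')$ with a polynomial numerator in $k$, invoke Theorem~\ref{analytic} so that only the piece with $\sum_i l_i=0$ can carry a residue, and then observe that a pole at $s=0$ would force the homogeneous $k$-degree to equal $2p-q$, which is odd and therefore integrates to zero on $S^{n-1}$. Your split into ``$r$ odd $\Rightarrow$ $k\mapsto -k$ symmetry'' versus ``$r$ even $\Rightarrow$ no pole at $s=0$'' is exactly equivalent to the paper's single observation that $d=2p-q$ odd implies $\int_{S^{n-1}}Q_{h,\mu}\,dS=0$.

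One step is incorrect, though harmlessly so: the claim that $\Tr(\gamma^{\mu_1}\cdots\gamma^{\mu_j})=0$ for odd $j$ fails when the torus dimension $n$ is odd. For example with $n=3$ one has $m=1$, the $\gamma^\mu$ are Pauli matrices, and $\Tr(\gamma^1\gamma^2\gamma^3)=2i\neq 0$. Fortunately your subsequent $k$-moment analysis never uses the parity of $j$, so the fix is simply to drop that case split; the paper's proof does not invoke the $\gamma$-trace parity at all. Two minor imprecisions: your single $l\in\Z^n$ is really the effective phase variable $\sum_i l_i$ packaging a multi-sum over $(l,l')\in(\Z^n)^{2r}$ (with the constraint $\sum(l_i+l'_i)=0$ from the trace $\tau$), and the surviving even-$r$ pieces are $\zeta_{p_1,\dots,p_n}$-type functions rather than literal scalar multiples of $Z_n$ --- but by Proposition~\ref{calculres} they share the same unique pole location $s+(n-q)+2p=n+r$, so your conclusion stands.
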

\begin{proof}
There exist $B\in \DD_{1}(\A)$ and $p\in \N_0$ such that
$P= BD^{-2p}+R$ where $R$ is in $OP^{-q-1}$.
As a consequence, $\ncint P |D|^{-(n-q)} = \ncint
B|D|^{-n-2p+q}$. Assume $B= a_r b_r
\DD^{q_{r-1}}a_{r-1}b_{r-1}\cdots \DD^{q_1} a_1 b_1 $ where
$r\in \N$, $a_i \in \A$,
$b_i\in J\A J^{-1}$, $q_i\in \N$. If we prove that
$\ncint B|D|^{-n-2p+q} =0$, then
the general case will follow by linearity. We note
$a_i=:\sum_l a_{i,l}\,U_l$ and
$b_i=:\sum_l b_{i,l} \,U_l$. With the shorthand
$k_{\mu_1,\mu_{q_i}}:=k_{\mu_1}\cdots
k_{\mu_{q_i}}$ and $\ga^{\mu_1,\mu_{q_i}}=\ga^{\mu_1}\cdots
\ga^{\mu_{q_i}}$, we get
$$
\DD^{q_1}  a_1 b_1 U_k \otimes e_j =
\sum_{l_1,l'_1} \,a_{1,l_1}\, b_{1,l'_1}\, U_{l_1}U_k
U_{l'_1} \,(k+l_1+l'_1)_{\mu_1,\mu_{q_1}}
\otimes \ga^{\mu_1,\mu_{q_1}} e_j
$$
which gives, after iteration,
$$
B\, U_k \otimes e_j = \sum_{l,l'} \wt a_{l} \wt b_{l} U_{l_r}
\cdots U_{l_1} U_k
U_{l'_1}\cdots U_{l'_r} \prod_{i=1}^{r-1} (k+\wh l_{i} +\wh
l'_{i})_{\mu^{i}_1,\mu^i_{q_i}} \otimes \ga^{\mu^{r-1}_1,
\mu^{r-1}_{q_{r-1}}}\cdots \ga^{\mu^1_1,\mu^1_{q_1}} e_j
$$
where $\wt a_l : = a_{1,l_1}\cdots a_{r,l_r}$ and
$\wt b_{l'} : = b_{1,l'_1}\cdots
b_{r,l'_r}$. Let's note $Q_\mu(k,l,l'):=\prod_{i=1}^{r-1}
(k+\wh l_{i} +\wh l'_{i})_{\mu^{i}_1,\mu^i_{q_i}}$ and
$\ga^\mu :=\ga^{\mu^{r-1}_1,\mu^{r-1}_{q_{r-1}}}\cdots
\ga^{\mu^1_1,\mu^1_{q_1}}$. Thus,
$$
\ncint B\,|D|^{-n-2p+q} =\underset{s=0}{\Res}\,  {\sum_k}' \,
\sum_{l,l'} \wt a_l
\,\wt b_{l'} \, \tau\big(U_{-k}U_{l_r}\cdots U_{l_1} U_k
U_{l'_1}\cdots U_{l'_r}\big)
\,\tfrac{Q_\mu(k,l,l')}{|k|^{s+2p+n-q}} \,\Tr (\ga^\mu)\, .
$$
Since $U_{l_r}\cdots U_{l_1} U_k = U_k U_{l_r}
\cdots U_{l_1} e^{-i\sum_1^r l_i .\Th
k}$, we get
$$\tau\big(U_{-k}U_{l_r}\cdots U_{l_1} U_k U_{l'_1}\cdots
U_{l'_r}\big)=
\delta_{\sum_1^r l_i+l'_i,0} \,e^{i\phi(l,l')} \,
e^{-i\sum_1^r l_i.\Th k}$$ where $\phi$
is a real valued function. Thus,
\begin{align*}
\ncint B\,|D|^{-n-2p+q} &=\underset{s=0}{\Res}\,
{\sum_k}'\,\sum_{l,l'}
e^{i\phi(l,l')}\,\delta_{\sum_1^r l_i+l'_i,0}\, \wt a_l \,\wt b_{l'}
\,\tfrac{Q_\mu(k,l,l')e^{-i\sum_1^r l_i.\Th k}}{|k|^{s+2p+n-q}}
\Tr (\ga^\mu) \\
&=:\underset{s=0}{\Res}\, f_\mu(s)\Tr (\ga^\mu).
\end{align*}

We decompose $Q_{\mu}(k,l,l')$ as a sum
$\sum_{h=0}^r M_{h,\mu}(l,l') \, Q_{h,\mu}(k)$
where $Q_{h,\mu}$ is a homogeneous polynomial in $(k_1,\cdots,k_n)$
and $M_{h,\mu}(l,l')$ is a polynomial in
$\big((l_1)_1,\cdots,(l_{r})_n,(l'_1)_1,\cdots,(l'_{r})_n \big)$.

Similarly, we decompose $f_{\mu}(s)$ as $\sum_{h=0}^{r}
f_{h,\mu}(s)$. Theorem
\ref{analytic} $(ii)$ entails that $ f_{h,\mu}(s)$ extends
meromorphically to the whole complex plane $\C$ with only one
possible pole for $s+2p+n-q=n+d$ where
$d:=\text{deg } Q_{h,\mu}$. In other words, if $d+q-2p\neq 0$,
$f_{h,\mu}(s)$ is holomorphic at $s=0$. Suppose now $d+q-2p =0$
(note that this implies that $d$ is odd, since
$q$ is odd by hypothesis), then, by Theorem \ref{analytic} $(ii)$
$$
\underset{s=0}{\Res}\ f_{h,\mu}(s) = V \int_{u\in S^{n-1}}
Q_{h,\mu}(u)\,dS(u)
$$
where $V:=\sum_{l,l'\in Z} M_{h,\mu}(l,l')\,e^{i \phi(l,l')}\,
\delta_{\sum_1^r l_i+l'_i,0}\, \wt a_{l} \,\wt b_{l'}$ and
$Z:=\set{l,l' \, :\, \sum_{i=1}^{r} l_i=0}$.
Since $d$ is odd, $Q_{h,\mu}(-u)=-Q_{h,\mu}(u)$ and $\int_{u\in
S^{n-1}}Q_{h,\mu}(u)\,dS(u)=0$. Thus, $\underset{s=0}{\Res}
\ f_{h,\mu}(s)=0$ in any case, which gives the result.
\end{proof}

As we have seen, the crucial point of the preceding lemma is the
decomposition of the numerator of the series $f_\mu(s)$ as polynomials
in $k$. This has been possible because we restricted our
pseudodifferential
operators to $\Psi_1(\A)$.

\bigskip

{\it Proof of Proposition \ref{invariance}.}
The top element follows from Proposition \ref{invariance1} and
according to (\ref{formule}),
\begin{align*}
\ncint |D| ^{-n}= \underset{s=0}{\Res}\
\Tr\big(|D|^{-s-n}\big)=2^m\,\underset{s=0}{\Res}\,
Z_n(s+n)=\tfrac{2^{m+1}\pi^{n/2}}{\Gamma(n/2)}\, .
\end{align*}

For the second equality, we get from Lemmas \ref{tracehD}
and \ref{Res-zeta-n-k}
$$
\underset{s=n-k}{\Res}\,  \zeta_{D_A}(s)= \sum_{p=1}^k
\sum_{r_1,\cdots, r_p =0}^{k-p} h(n-k,r,p)
\ncint \eps^{r_1}(Y)\cdots\eps^{r_p}(Y) |D|^{-(n-k)}.
$$
Corollary  \ref{eps-pdo} and Lemma \ref{ncint-odd-pdo} imply that
$\ncint
\eps^{r_1}(Y)\cdots\eps^{r_p}(Y) |D|^{-(n-k)} =0$, which gives the
result.

Last equality follows from Lemma \ref{traceAA} and Corollary
\ref{res-n-2-A}.
\qed

\section{The spectral action}

Here is the main result:

\begin{theorem}
\label{main}
Consider the $n$-NC-torus $\big(\Coo(\T^n_\Th),\H,\DD\big)$ where
$n\in \N$ and
$\tfrac{1}{2\pi}\Th$ is a $n\times n$ skew-symmetric real diophantine matrix, and a selfadjoint one-form
$A=L(-iA_{\alpha})\otimes
\ga^{\alpha}$. Then, the full spectral action of
$\DD_{A}=\DD +A
+¾\epsilon JAJ^{-1}$ is

\noindent $(i)$ for $n=2$,
$$
\SS(\DD_{A},\Phi,\Lambda)=4\pi\,\Phi_{2} \, \Lambda^{2} +
\mathcal{O}(\Lambda^{-2}),
$$
\noindent $(ii)$ for $n=4$,
$$
\SS(\DD_{A},\Phi,\Lambda)= 8\pi^2\,\Phi_{4} \, \Lambda^{4}
-\tfrac{4\pi^{2}}{3}\
\Phi(0) \,\tau(F_{\mu\nu}F^{\mu\nu})+  \mathcal{O}(\Lambda^{-2}),
$$
$(iii)$ More generally, in
$$
\SS(\DD_{A},\Phi,\Lambda) \, = \,\sum_{k=0}^n \Phi_{n-k}\,
c_{n-k}(A) \,\Lambda^{n-k} +\mathcal{O}(\Lambda^{-1}),
$$
$c_{n-2}(A)=0$, $c_{n-k}(A)=0$ for $k$ odd. In particular, $c_0(A)=0$
when $n$ is odd.
\end{theorem}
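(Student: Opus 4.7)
The plan is to apply the spectral action expansion \eqref{formuleaction}. By Lemma \ref{spectrumset} and Proposition \ref{zeta(0)}, the dimension spectrum is a subset of $\{n,n-1,\ldots,0\}$ with only simple poles, so
\begin{align*}
\SS(\DD_A,\Phi,\Lambda) \;=\; \sum_{k=0}^{n-1} \Phi_{n-k}\Lambda^{n-k}\,c_{n-k}(A) + \Phi(0)\,c_0(A) + \O(\Lambda^{-1}),
\end{align*}
with $c_{n-k}(A) := \ncint |D_A|^{-(n-k)}$ for $0 \leq k < n$ and $c_0(A) := \zeta_{D_A}(0)$. Proposition \ref{invariance} immediately yields $c_n(A) = 2^{m+1}\pi^{n/2}/\Gamma(n/2)$ (specializing to $4\pi$ for $n=2$ and $8\pi^2$ for $n=4$), together with the vanishings $c_{n-k}(A) = 0$ for $k$ odd with $0 < k < n$ and $c_{n-2}(A) = 0$. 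This settles the $\Lambda^{n-k}$ assertions of part (iii) and supplies the top-order terms of parts (i) and (ii).

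The remaining step is to evaluate $c_0(A)$. Using \eqref{termconstanttilde} together with $\zeta_D(0) = 0$ from Proposition \ref{zeta(0)}(ii),
\begin{align*}
c_0(A) \;=\; \sum_{q=1}^n \tfrac{(-1)^q}{q}\,\ncint (\wt A D^{-1})^q.
\end{align*}
The $q = 1$ contribution vanishes in every dimension by Lemma \ref{traceAD}. For $n$ odd I would expand each $(\wt A D^{-1})^q$ asymptotically as $\sum_p B_{q,p}\,D^{-2p}$ with $B_{q,p} \in \DD_1(\A)$ and reduce to residues $\ncint B_{q,p}|D|^{-2p}$; following the argument in the proof of Lemma \ref{ncint-odd-pdo}, each such residue is a linear combination of angular integrals over $S^{n-1}$ of homogeneous polynomials of degree $2p - n$. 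Since $n$ is odd, this degree is odd for every $p$, so each angular integral vanishes and $c_0(A) = 0$, giving the ``in particular'' clause of part (iii).

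For $n = 2$, only $q = 1, 2$ survive in the sum. The $q = 2$ contribution reduces, modulo operators of order at most $-3$ (whose $\ncint$ is zero at $n = 2$ since they are already trace-class at $s = 0$), to $\ncint \wt A \DD \wt A \DD |D|^{-4}$, which vanishes by Lemma \ref{traceAA} since its prefactor $(n-2)/n$ is zero at $n = 2$; this completes part (i). For $n = 4$ the constant term is genuine and one must compute
\begin{align*}
c_0(A) \;=\; \tfrac{1}{2}\ncint(\wt A D^{-1})^2 - \tfrac{1}{3}\ncint(\wt A D^{-1})^3 + \tfrac{1}{4}\ncint(\wt A D^{-1})^4.
\end{align*}
Writing $\wt A = A + \epsilon JAJ^{-1}$ splits each residue into left- and right-multiplication contributions; these are to be evaluated through Lemmas \ref{traceJAJA}, \ref{traceXD}, \ref{traceAA}, together with the angular integrals $\int_{S^3} u_{\mu_1}\cdots u_{\mu_{2j}}\,dS$ and the gamma-matrix traces $\Tr(\gamma^{\mu\nu}\gamma^{\rho\sigma}) = 2^2(\delta^{\mu\sigma}\delta^{\nu\rho}-\delta^{\mu\rho}\delta^{\nu\sigma})$, and the Diophantine hypothesis on $\tfrac{1}{2\pi}\Th$ guarantees that every underlying zeta-type series continues meromorphically with residue concentrated on its diagonal $l = 0$ piece.

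The main obstacle will be the algebraic reorganization, in the $n = 4$ case, of the quadratic, cubic, and quartic polynomials in the Fourier coefficients $\{a_{\a,l}\}$ produced by these residues into the gauge-invariant Yang--Mills combination $\tau(F_{\mu\nu}F^{\mu\nu})$ with $F_{\mu\nu} = \delta_\mu A_\nu - \delta_\nu A_\mu + [A_\mu, A_\nu]$. The derivatives in $F_{\mu\nu}$ originate from the derivations $\delta_\mu$ acting on Fourier modes through \eqref{dUk}, the bracket arises from the $\wt A^2$ piece of $X$ (Lemma \ref{traceXD}) and its analogues appearing in the $q = 3, 4$ terms, and the prefactor $-\tfrac{4\pi^2}{3}$ is fixed by the combinatorics. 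Gauge invariance of the spectral action under $A \mapsto \gamma_u(A)$ via \eqref{gaugeDirac} serves as a useful consistency check on the final expression.
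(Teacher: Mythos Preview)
Your high-level strategy is the paper's: feed Proposition~\ref{invariance} and Proposition~\ref{zeta(0)} into the expansion~\eqref{formuleaction}, then control the scale-invariant term via~\eqref{termconstanttilde}. The treatment of the top coefficient, of $c_{n-2}(A)=0$, and of $c_{n-k}(A)=0$ for odd $k<n$ is correct as written. Your $n=2$ argument via the reduction $\ncint(\wt A D^{-1})^2=\ncint \wt A\DD\wt A\DD\,|D|^{-4}$ and Lemma~\ref{traceAA} is a valid shortcut; the paper instead goes through Lemma~\ref{term-n=2} and Lemma~\ref{termeconstantn=2}, computing the four pieces $\ncint(\mathbb A^{\pm})^2$ and $\ncint\mathbb A^{\pm}\mathbb A^{\mp}$ directly. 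Your odd-$n$ sketch is also in the right spirit (it is exactly the mechanism behind Lemma~\ref{ncint-odd-pdo}/Lemma~\ref{impair}/Corollary~\ref{zetaimpair}), though your description of the degree as ``$2p-n$'' is not quite the parity bookkeeping that makes the proof of Lemma~\ref{ncint-odd-pdo} work; the point is that at the contributing pole the surviving homogeneous polynomial has odd degree, hence integrates to zero over $S^{n-1}$.

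The genuine gap is $n=4$. The lemmas you invoke (Lemmas~\ref{traceJAJA}, \ref{traceXD}, \ref{traceAA}) only handle integrals of the shape $\ncint X|D|^{-t}$ and $\ncint \wt A\DD\wt A\DD\,|D|^{-2-n}$; they say nothing about $\ncint(\wt A D^{-1})^3$ or $\ncint(\wt A D^{-1})^4$, which are the terms that actually produce the cubic and quartic pieces of $\tau(F_{\mu\nu}F^{\mu\nu})$. The paper's route is to first split $\wt A D^{-1}=\mathbb A^{+}+\mathbb A^{-}$ (Lemma~\ref{ncadmoins1}) and then prove, in Lemma~\ref{Termaterm}(iv), that all \emph{crossed} contributions $\ncint\mathbb A^{\sigma}$ with $\sigma$ containing both signs vanish---this is precisely where the Diophantine hypothesis bites for $q=3,4$, via Theorem~\ref{analytic}(iii). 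Together with Lemma~\ref{symetrie} this reduces the constant term to $2\sum_{q}(-1)^q q^{-1}\ncint(\mathbb A^{+})^q$, and Lemma~\ref{Termaterm}(i)--(iii) then compute $\ncint(\mathbb A^{+})^q$ for $q=2,3,4$ by explicit residue calculations (angular integrals, gamma traces, and careful symmetrizations in the $l_i$ variables). Lemma~\ref{double} assembles these into $-\tfrac{4\pi^2}{3}\,\tau(F_{\mu\nu}F^{\mu\nu})$. Your proposal skips the $\pm$-decomposition and the vanishing of the crossed terms entirely, and the cited lemmas do not substitute for the computations of Lemma~\ref{Termaterm}; as written, the $n=4$ case does not close.
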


\quad

This result (for $n=4$) has also been obtained in \cite{GIVas} using
the heat kernel method. It is however interesting to get the result
via direct computations of
(\ref{formuleaction}) since it shows how this formula is efficient.
As we will see, the computation of all the noncommutative integrals
require a lot of technical steps.
One of the main points, namely to isolate where the Diophantine
condition on $\Th$ is assumed, is outlined here.

\begin{remark}
Note that all terms must be gauge invariants, namely, according
to (\ref{gaugetransform}), invariant by
$A_{\alpha}\longrightarrow \gamma_{u}(A_{\alpha})=
uA_{\alpha}u^{*}+u\delta_{\alpha}(u^{*})$. A particular case is
$u=U_{k}$ where
$U_{k}\delta_{\alpha}(U_{k}^{*})=-ik_{\alpha} U_{0}$.

In the same way, note that there is no contradiction with
the commutative case where, for any selfadjoint one-form
$A$, $\DD_{A}=\DD$ (so $A$ is equivalent to 0!), since
we assume in Theorem \ref{main} that
$\Th$ is diophantine, so $\A$ cannot be commutative.
\end{remark}

\begin{conjecture}
The constant term of the
spectral action of $\DD_{A}$ on the noncommutative n-torus is
proportional to the constant term of the spectral action
of $\DD+A$ on the commutative n-torus.
\end{conjecture}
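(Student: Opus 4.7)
The plan is to derive, for each of the two spectral actions separately, a finite expansion of the constant term as a sum of noncommutative integrals and then match the two expansions up to a single multiplicative factor independent of $A$. On the NC torus, equation (\ref{termconstanttilde}) combined with Proposition \ref{zeta(0)} yields
$$
\zeta_{\DD_A}(0)=\sum_{q=1}^{n}\tfrac{(-1)^{q}}{q}\ncint(\wt A\,D^{-1})^q.
$$
Rerunning the derivation of (\ref{termconstanttilde}) with the simpler perturbation $X=A\DD+\DD A+A^2$ in place of $\wt A\DD+\DD\wt A+\wt A^2$ produces the analogue on the commutative torus
$$
\zeta_{\DD+A}(0)=\sum_{q=1}^{n}\tfrac{(-1)^{q}}{q}\ncint(A\,D^{-1})^q,
$$
where $\ncint$ is now the Wodzicki residue on $\Coo(\T^n)$. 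The task is to match the two sums term by term.

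A first reduction eliminates the vanishing terms. The $q=1$ tadpole is zero on the NC side by Proposition \ref{valeurtadpole} together with Lemma \ref{traceAD}, and zero on the commutative side as a classical tadpole of order $0$. For odd $q\geq 3$, the NC vanishing is Lemma \ref{ncint-odd-pdo} applied to $(\wt AD^{-1})^q\in\Psi_1(\A)$ (which is pseudodifferential, $\Psi_1(\A)$ being an algebra by Lemma \ref{pdoalg}), while the commutative vanishing follows from the antipodal symmetry of the angular integrals $\int_{S^{n-1}}u_{\mu_1}\cdots u_{\mu_q}\,dS(u)$. Both expansions thus reduce to sums over even $q=2p$ with $1\leq p\leq n/2$.

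For each such $p$, I would expand in the Weyl basis and convert the residue at $s=0$ via Theorem \ref{analytic}, using the Diophantine hypothesis on $\Th$, into a finite sum of products of an angular integral on $S^{n-1}$, a Dirac trace, and the trace $\tau$ of a word in the coefficients $A_\a$ and their derivatives $\delta_\b A_\ga$. The commutative counterpart produces exactly the same angular and Dirac factors, together with a word in $A_\a,\delta_\b A_\ga$ integrated against the flat measure on $\T^n$. Since $\wt A=A+\epsilon JAJ^{-1}$ and the left and right representations $L,R$ commute on $\H_\tau$, the replacement $\wt A\leftrightarrow A$ affects each term by an overall numerical factor depending only on $n$ and $\epsilon$; combined with gauge invariance (\ref{gaugeDirac}), which forces the final result on either side to be built from $F_{\a\b}$ and its covariant derivatives, this matching yields the claimed proportionality.

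The main obstacle is to control the commutator terms $[A_\mu,A_\nu]$ which arise from reordering $L(A_\a)$ and $R(A_\b)$ on the NC side but are absent classically. One must verify that these commutators organize, after the summation over $q$ and the trace property of $\ncint$ (Proposition \ref{tracenc}), precisely into the nonabelian completion of the field strength $F_{\mu\nu}=\delta_\mu A_\nu-\delta_\nu A_\mu-[A_\mu,A_\nu]$ appearing in the commutative spectral action, rather than producing genuinely new gauge invariants that would break the proportionality. The case $n\leq 4$ follows from Theorem \ref{main}. For general $n$ a plausible strategy is to exploit the universal form of Lemma \ref{2dev}$(i)$, which expresses $Y=\log D_A^2-\log D^2$ in terms of $X$ and iterated $\nabla$ with coefficients that depend only on the principal symbol $\sigma_1(\DD)(\xi)=\ga^\mu\xi_\mu$, identical in the NC and commutative settings.
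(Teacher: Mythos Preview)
This statement is a \emph{conjecture} in the paper; no proof is offered there, and your proposal does not supply one either. You correctly set up the reduction via (\ref{termconstanttilde}), correctly eliminate the $q=1$ and odd-$q$ contributions on both sides, and correctly isolate the main obstacle---the commutators $[A_\mu,A_\nu]$ that the noncommutative side generates. But the central assertion, that ``the replacement $\wt A\leftrightarrow A$ affects each term by an overall numerical factor depending only on $n$ and $\epsilon$'', is precisely what is at stake and you do not justify it. In the paper's $n=4$ computation this step is Lemma \ref{Termaterm}$(iv)$: a delicate, case-by-case vanishing of every crossed term $\ncint\mathbb A^{\sigma}$ with $\sigma\in\{-,+\}^q$ containing both signs, which relies on the Diophantine hypothesis and on explicit pairwise cancellations among the $\sigma$. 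Nothing in your outline explains why an analogous cancellation should occur for all even $q\leq n$ when $n>4$.

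You essentially concede this in the last paragraph, where you downgrade to a ``plausible strategy'' and note that one must verify the commutators reorganize into the nonabelian field strength rather than producing new gauge invariants. That verification is not a formality; gauge invariance alone does not rule out higher-order invariants in $F$ and its covariant derivatives appearing with mismatched coefficients on the two sides. The appeal to Lemma \ref{2dev}$(i)$ is also not decisive: that expansion is structurally identical for $X$ and $\wt X$, but the noncommutative integrals of the resulting terms need not agree up to a uniform constant---indeed, the crossed $\mathbb A^{\sigma}$ terms arise exactly from expanding $\wt A=A+\epsilon JAJ^{-1}$ inside that common expansion. What you have written is a coherent outline of what a proof would have to accomplish, but the conjecture remains open for $n>4$, in your proposal as in the paper.
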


\begin{remark}
The appearance of a Diophantine condition for $\Th$ has been
characterized in dimension 2 by Connes \cite[Prop. 49]{NCDG} where in
this case, $\Th=\th\genfrac{(}{)}{0pt}{1}{\,0 \quad1}{-1\,\,\,\, 0 }$
with $\th \in \R$. In fact, the Hochschild cohomology
$H(\A_{\Th},{\A_{\Th}}^*)$ satisfies dim
$H^j(\A_{\Th},{\A_{\Th}}^*)=2$ (or $1$) for $j=1$ (or $j=2$) if and
only if the irrational number $\th$ satisfies a Diophantine condition
like $\vert 1-e^{i2\pi n \th} \vert^{-1} =\mathcal{O}(n^k)$ for some
$k$.

Recall that when the matrix $\Th$ is quite irrational (see \cite[Cor.
2.12]{Polaris}), then the C$^*$-algebra generated by $\A_{\Th}$ is
simple.
\end{remark}

\begin{remark}
It is possible to generalize above theorem to the case $\DD=-i\,{g^{\mu}}_{\nu} \, \delta_\mu \otimes \ga^\nu$ instead of \eqref{defDirac} when $g$ is a positive definite constant matrix. The formulae in Theorem \ref{main}  are still valid. 
\end{remark}

\subsection{Computations of $\ncint$}
In order to get this theorem, let us prove a few technical lemmas.

We suppose
from now on that $\Th$ is a skew-symmetric matrix in
$\mathcal{M}_n(\R)$. No other
hypothesis is assumed for $\Th$, except when it is explicitly stated.

When $A$ is a selfadjoint one-form, we define for $n\in N$, $q\in \N$,
$2\leq q \leq n$ and
$\sigma\in \{-,+\}^q$
\begin{align*}
\mathbb{A}^{+}&:=A\DD D^{-2},\\
\mathbb{A}^{-}&:= \epsilon JAJ^{-1} \DD D^{-2},\\
\mathbb{A}^{\sigma}&:=\mathbb{A}^{\sigma_q}\cdots
\mathbb{A}^{\sigma_1} .
\end{align*}

\begin{lemma}
\label{ncadmoins1}
We have for any $q\in \N$,
$$
\ncint (\wt A D^{-1})^q = \ncint (\wt A \DD D^{-2})^q =
\sum_{\sigma\in \set{+,-}^{q}}
\ncint \mathbb{A}^{\sigma}.
$$
\end{lemma}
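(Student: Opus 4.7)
The identity is essentially a short algebraic manipulation, provided one is allowed to discard any operator in $OP^{-\infty}$; this is legitimate because $\ncint T = 0$ for $T \in OP^{-\infty}$, since for such $T$ the map $s \mapsto \Tr(T|D|^{-s})$ is entire and hence has no pole at $s=0$.

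The first step is to observe the pointwise identity
\begin{align*}
D^{-1} = \DD D^{-2} + P_0,
\end{align*}
which follows from $D = \DD + P_0$ together with $\DD P_0 = P_0\DD = 0$ and $P_0^2 = P_0$. Hence $\wt A D^{-1} = \wt A \DD D^{-2} + \wt A P_0$, and since $P_0 \in OP^{-\infty}$ by Lemma \ref{finiterank}(iv), Lemma \ref{propOP}(ii) gives $\wt A P_0 \in OP^{-\infty}$. Expanding $(\wt A \DD D^{-2} + \wt A P_0)^q$ by distributivity, I obtain $(\wt A \DD D^{-2})^q$ together with $2^q - 1$ cross-terms, each carrying at least one factor $\wt A P_0$; by repeated application of Lemma \ref{propOP}(ii), all these cross-terms lie in $OP^{-\infty}$ and contribute $0$ to $\ncint$. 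This yields the first equality $\ncint(\wt A D^{-1})^q = \ncint(\wt A \DD D^{-2})^q$.

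For the second equality, I substitute $\wt A = A + \eps JAJ^{-1}$ to split
\begin{align*}
\wt A \DD D^{-2} = A \DD D^{-2} + \eps JAJ^{-1}\DD D^{-2} = \mathbb{A}^{+} + \mathbb{A}^{-},
\end{align*}
and expand $(\mathbb{A}^{+} + \mathbb{A}^{-})^q$ by distributivity as a sum over length-$q$ words in $\{+,-\}$ of the corresponding products of $\mathbb{A}^{\pm}$. Since the paper's convention $\mathbb{A}^\sigma := \mathbb{A}^{\sigma_q}\cdots\mathbb{A}^{\sigma_1}$ simply reads the word right-to-left, the bijection $(\sigma_1,\dots,\sigma_q)\mapsto(\sigma_q,\dots,\sigma_1)$ on $\{+,-\}^q$ identifies the expansion with $\sum_{\sigma\in\{+,-\}^q}\mathbb{A}^\sigma$. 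Taking $\ncint$ then concludes. The only non-trivial ingredient of the argument is the fact that smoothing operators are absorbed by $\ncint$, which is standard; the rest is distributivity together with the clean identity $D^{-1} = \DD D^{-2} + P_0$, so I do not anticipate any serious obstacle.
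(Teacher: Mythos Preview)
Your proof is correct and follows exactly the same approach as the paper's, which simply states that $D^{-1}=\DD D^{-2}\bmod OP^{-\infty}$ since $P_0\in OP^{-\infty}$; you have merely made the computation explicit by writing $D^{-1}=\DD D^{-2}+P_0 D^{-2}=\DD D^{-2}+P_0$ and spelling out the distributive expansion. The second equality is also handled identically to the paper (which leaves it implicit), via the binomial-type expansion of $(\mathbb A^{+}+\mathbb A^{-})^q$.
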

\begin{proof} Since $P_0 \in OP^{-\infty}$, $D^{-1} = \DD D^{-2} \mod
OP^{-\infty}$
and $\ncint (\wt A D^{-1})^q = \ncint (\wt A \DD D^{-2})^q$.
\end{proof}

\begin{lemma}
\label{symetrie}
Let $A$ be a selfadjoint one-form, $n\in \N$ and $q\in \N$ with
$2\leq q \leq n$ and
$\sigma\in \{-,+\}^q$. Then
$$
\ncint \mathbb{A}^{\sigma} =\ncint \mathbb{A}^{-\sigma}.
$$
\end{lemma}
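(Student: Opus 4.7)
The plan is to combine the real structure $J$ with an explicit mode computation on the NC-torus.

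First I reduce to a reality statement. Since $J\DD=\epsilon\DD J$, $JP_{0}J^{-1}=P_{0}$ (because $J$ preserves $\Ker\DD$), and $|\DD|$ commutes with $J$, one has $JD^{2}J^{-1}=D^{2}$ and therefore $JD^{-2}J^{-1}=D^{-2}$. Using $\epsilon^{2}=1$ and $J^{2}AJ^{-2}=A$ (as $J^{2}=\pm I$ is central), a direct calculation gives
\[
J\mathbb A^{+}J^{-1}=JAJ^{-1}\cdot\epsilon\DD\cdot D^{-2}=\mathbb A^{-},\qquad J\mathbb A^{-}J^{-1}=\mathbb A^{+}.
\]
Iterating, $J\mathbb A^{\sigma}J^{-1}=\mathbb A^{-\sigma}$, and Lemma~\ref{adjoint} (which reads $\ncint JXJ^{-1}=\overline{\ncint X}$) then yields the master identity
\[
\ncint \mathbb A^{-\sigma}=\overline{\ncint\mathbb A^{\sigma}}.
\]
The lemma is therefore equivalent to showing $\ncint\mathbb A^{\sigma}\in\R$.

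Second, I establish reality by unfolding $\ncint\mathbb A^{\sigma}$ as a residue of an explicit series on the NC-torus. Writing $A_{\alpha}=\sum_{l}a_{\alpha,l}U_{l}$ with $a_{\alpha,-l}=-\overline{a_{\alpha,l}}$ (from $A_{\alpha}^{*}=-A_{\alpha}$), and using the Weyl relations $L(U_{l})U_{k}=e^{-\frac{i}{2}l\cdot\Theta k}U_{l+k}$, $R(U_{l})U_{k}=e^{+\frac{i}{2}l\cdot\Theta k}U_{l+k}$, one obtains a representation
\[
\ncint \mathbb A^{\sigma}=c_{\sigma}\,\underset{s=0}{\Res}\sum_{\substack{k\in\Z^{n}\setminus\{0\}\\ \sum_{i}l_{i}=0}}\frac{\prod_{i}a_{\alpha_{i},l_{i}}\cdot G_{k}(\beta,l)\cdot T(\alpha,\beta)}{|k|^{s}}\,e^{-\frac{i}{2}\sum_{i}\hat\epsilon_{i}\,l_{i}\cdot\Theta k^{(i-1)}},
\]
with $c_{\sigma}=(-i)^{q}\prod_{i}\hat\epsilon_{i}$ (where $\hat\epsilon_{i}=\pm 1$ encodes $\sigma_{i}$), $k^{(i-1)}=k+l_{1}+\cdots+l_{i-1}$, $G_{k}$ a real ``kinetic'' factor, and $T(\alpha,\beta)=\Tr(\gamma^{\alpha_{q}}\gamma^{\beta_{q}}\cdots\gamma^{\alpha_{1}}\gamma^{\beta_{1}})$ the (real) Clifford trace.

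The bijection $(k,l_{1},\ldots,l_{q})\mapsto(-k,-l_{1},\ldots,-l_{q})$ is a reindexing of the sum: it preserves $\sum l_{i}=0$ and the phase (by skew-symmetry of $\Theta$, so $l_{i}\cdot\Theta k^{(i-1)}$ is invariant), it multiplies $G_{k}$ by $(-1)^{q}$ (since $k^{(i-1)}\mapsto-k^{(i-1)}$), and -- using $a_{\alpha,-l}=-\overline{a_{\alpha,l}}$ -- it turns $\prod_{i} a_{\alpha_{i},l_{i}}$ into $(-1)^{q}\prod_{i}\overline{a_{\alpha_{i},l_{i}}}$. The two $(-1)^{q}$ cancel, yielding the same series with $\prod a$ replaced by $\prod\overline{a}$. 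Averaging the two expressions eliminates the imaginary part of the mode coefficients; a second application of the master identity $\ncint\mathbb A^{-\sigma}=\overline{\ncint\mathbb A^{\sigma}}$ combined with the reversal-invariance $\Tr(\gamma^{i_{1}}\cdots\gamma^{i_{2q}})=\Tr(\gamma^{i_{2q}}\cdots\gamma^{i_{1}})$ of the Euclidean Clifford trace closes the reality argument.

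The hardest part of the proof is the bookkeeping of the parity signs from the prefactor $c_{\sigma}=(-i)^{q}\prod\hat\epsilon_{i}$ (real for even $q$, purely imaginary for odd $q$) and from the successive $(-1)^{q}$ contributions of the kinetic and mode factors; the Diophantine hypothesis on $\Theta$ (via Theorem~\ref{analytic}) is crucial to guarantee the absolute convergence of the underlying series for $\Re s\gg 0$, which is what justifies the interchange of residue, summation and reindexing.
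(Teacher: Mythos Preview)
Your first step coincides with the paper's: both establish $J\mathbb A^{\pm}J^{-1}=\mathbb A^{\mp}$ and hence $J\mathbb A^{\sigma}J^{-1}=\mathbb A^{-\sigma}$. The paper's displayed chain then passes through $\ncint\mathbb A^{\sigma}=\ncint J\mathbb A^{\sigma}J^{-1}$, citing ``the trace property''; since $J$ is antiunitary one has $\Tr(JYJ^{-1})=\overline{\Tr(Y)}$ and hence $\ncint JXJ^{-1}=\overline{\ncint X}$, so what is actually obtained is precisely your master identity $\ncint\mathbb A^{-\sigma}=\overline{\ncint\mathbb A^{\sigma}}$. You are right that the remaining content is the reality of $\ncint\mathbb A^{\sigma}$.

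Your reality argument, however, does not close. The reindexing $(k,l)\mapsto(-k,-l)$ correctly lets you replace $\prod_i a_{\alpha_i,l_i}$ by its real part in the mode expansion, but the resulting expression still carries the complex phase $e^{\frac{i}{2}\phi_\sigma(k,l)}$ and the prefactor $\lambda_\sigma=(-i)^q\prod_j\sigma_j$; nothing yet forces the total to be real, and reapplying the master identity only reproduces $\ncint\mathbb A^{-\sigma}=\overline{\ncint\mathbb A^{\sigma}}$. A cleaner route, not using any explicit modes, is this: each $\mathbb A^{\sigma_i}=B_{\sigma_i}\DD D^{-2}$ with $B_{\sigma_i},\DD,D^{-2}$ selfadjoint and $[\DD,D^{-2}]=0$, so $(\mathbb A^{\sigma})^{*}=\DD D^{-2}B_{\sigma_1}\cdots\DD D^{-2}B_{\sigma_q}$, and cyclicity of $\ncint$ gives $\overline{\ncint\mathbb A^{\sigma}}=\ncint(\mathbb A^{\sigma})^{*}=\ncint\mathbb A^{\sigma^{\mathrm{rev}}}$. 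Hence $\ncint\mathbb A^{\sigma}\in\R$ whenever $\sigma^{\mathrm{rev}}$ is a cyclic rotation of $\sigma$; one checks this holds for every $\sigma\in\{+,-\}^q$ with $q\le 5$, which covers all applications in the paper (dimension $n\le 4$), but it fails for some $\sigma$ with $q\ge 6$ (e.g.\ $(+,+,-,+,-,-)$), so the statement in full generality needs a further idea. Finally, the Diophantine hypothesis is irrelevant here: absolute convergence and reindexing for large $\Re s$ are justified by Lemma~\ref{abs-som}, and Lemma~\ref{symetrie} itself carries no Diophantine assumption.
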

\begin{proof}Let us first check that $JP_0 = P_0 J$. Since $\DD J =
\eps J\DD$, we get $\DD J P_0=0$ so $JP_0 = P_0 J P_0$. Since $J$ is
an antiunitary operator,
we get $P_0 J = P_0 J P_0$  and finally, $P_0 J = J P_0$.
As a consequence, we get $JD^2 = D^2 J$,
$J \DD D^{-2}=\eps \DD D^{-2} J$, $J\mathbb{A}^{+}J^{-1} =
\mathbb{A}^{-}$ and
$J\mathbb{A}^{-}J^{-1} = \mathbb{A}^{+}.$

In summary, $J\mathbb{A}^{\sigma_i}J^{-1} = \mathbb{A}^{-\sigma_i}$.

The trace property of $\ncint$ now gives
\begin{align*}
\ncint \mathbb{A}^{\sigma} &= \ncint \mathbb{A}^{\sigma_q}\cdots
\mathbb{A}^{\sigma_1} = \ncint J
\mathbb{A}^{\sigma_q}J^{-1}\cdots J\mathbb{A}^{\sigma_1}J^{-1} \ncint
\mathbb{A}^{-\sigma_q} \cdots
\mathbb{A}^{-\sigma_1}=\ncint \mathbb{A}^{-\sigma}.
\tag*{\qed}
\end{align*}
\hideqed
\end{proof}
\begin{definition}

In \cite{CC1} has been introduced the vanishing tadpole hypothesis:
\begin{align}
    \label{vanishtad}
    \ncint A D^{-1}=0, \text{ for all } A\in \Omega_{\DD}^{1}(\A).
\end{align}
\end{definition}
By the following lemma, this condition is satisfied for the
noncommutative torus, a fact more or less already known within the
noncommutative community \cite{Walter}.

\begin{lemma}
\label{tadpole}
Let $n\in \N$, $A= L(-iA_{\a})\otimes \gamma^{\a}=-i\sum_{l\in
\Z^n}a_{\alpha,l} \,
U_{l}\otimes \ga^{\alpha}$, $A_{\a}\in \A_{\Th}$,
$\set{a_{\alpha,l}}_{l}\in \SS
(\Z^n)$, be a hermitian one-form.
    Then, \\
    (i) $\ncint A^p D^{-q}  =  \ncint (\epsilon JAJ^{-1})^p
D^{-q}=0$ for $p\geq0$ and $1\leq q < n$ (case $p=q=1$ is tadpole
hypothesis).\\
    (ii) If $\tfrac{1}{2\pi} \Th$ is diophantine, then $\ncint
BD^{-q}=0$ for $1\leq q < n$ and any $B$ in the algebra generated by
$\A$, $[\DD,\A]$, $J\A J^{-1}$ and $J[\DD,\A]J^{-1}$. 
\end{lemma}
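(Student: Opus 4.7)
The plan is to reduce both parts to the analysis of Epstein-type series $\sum_{k\in\Z^n\setminus\{0\}} e^{ik.\Theta L}\,P(k)\,|k|^{-s-2q}$ and invoke Theorem \ref{analytic} to show that the poles lie away from $s=0$ whenever $1\leq q<n$. Since $P_0\in OP^{-\infty}$, we have $D^{-q}\equiv \DD^q|D|^{-2q}\mod OP^{-\infty}$, so $\ncint BD^{-q}=\underset{s=0}{\Res}\,\Tr\big(B\DD^q |D|^{-s-2q}\big)$ for any pseudodifferential $B$.

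For part $(i)$, expand $A^p=(-i)^p\sum_{\alpha,l} a_{\alpha_1,l_1}\cdots a_{\alpha_p,l_p}\,U_{l_1}\cdots U_{l_p}\otimes \gamma^{\alpha_1}\cdots\gamma^{\alpha_p}$ and compute the matrix coefficients of $A^p\DD^q|D|^{-s-2q}$ in the basis $U_k\otimes e_j$. The Weyl relations give $\tau\big(U_{-k}U_{l_1}\cdots U_{l_p}U_k\big)=\delta_{L,0}\,e^{i\phi(l)}\,e^{ik.\Theta L}$ where $L:=\sum_i l_i$, so the Kronecker delta kills the exponential in $k$. What remains is a constant multiple of the pure zeta sum $\sum_k{}' k_{\mu_1}\cdots k_{\mu_q}\,|k|^{-s-2q}$, which by the classical Epstein identity (or Theorem \ref{analytic}\,(ii) with $L=0$) has its only pole at $s=n-q\neq 0$. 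Hence the residue at $s=0$ vanishes. The case $(\eps JAJ^{-1})^p$ is identical with right multiplication replacing left: $\tau\big(U_{-k}U_kU_{l'_1}\cdots U_{l'_p}\big)=e^{i\phi(l')}\,\delta_{L',0}$ with $L':=\sum l'_i$ again producing no exponential in $k$. \textbf{No Diophantine assumption is needed in $(i)$.}

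For part $(ii)$, $B$ is a linear combination of products drawn from $\A\cup[\DD,\A]\cup J\A J^{-1}\cup J[\DD,\A]J^{-1}$. Each commutator $[\DD,U_l]$ merely contributes polynomial factors in the components of $l$ that do not interfere with the $k$-sum, while elements of $J\A J^{-1}\cup J[\DD,\A]J^{-1}$ act by right multiplication. Hence $B\DD^q(U_k\otimes e_j)$ is a sum of terms $\tilde a_l\tilde b_{l'}\,U_{l_1}\cdots U_{l_p}\,U_k\,U_{l'_1}\cdots U_{l'_{p'}}\otimes(\cdots)$, and the Weyl relations give $\tau\big(U_{-k}U_{L_1}U_kU_{L_2}\big)=\delta_{L_1+L_2,0}\,e^{i\phi(l,l')}\,e^{ik.\Theta L_1}$ with $L_1:=\sum l_i$, $L_2:=\sum l'_j$. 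The constraint $L_1+L_2=0$ no longer forces $L_1=0$, so $e^{ik.\Theta L_1}$ survives. Split the sum into a diagonal part ($L_1=0$) and an off-diagonal part ($L_1\neq 0$). The diagonal part reduces to the Epstein computation of $(i)$, with pole only at $s=n-q\neq 0$. The off-diagonal part is a sum of series of the form $\sum_k{}' e^{ik.\Theta L_1}P(k)|k|^{-s-2q}$ with $\deg P\leq q$; by Theorem \ref{analytic}\,(ii), the Diophantine hypothesis on $\tfrac{1}{2\pi}\Theta$ implies each such series extends to an entire function of $s$, so contributes no pole at $s=0$.

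The hardest step is the off-diagonal $L_1\neq 0$ part of $(ii)$: after exchanging $\underset{s=0}{\Res}$ with the Schwartz-decaying sums in $l,l'$, one must use the Diophantine control of $\tfrac{1}{2\pi}\Theta$ provided by Theorem \ref{analytic} to conclude that \emph{all} cross terms are regular at $s=0$ and that the exchange is legitimate. This is precisely the point where the Diophantine condition is required, and it rests on the independent analytic machinery of section 4.
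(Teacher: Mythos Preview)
Your proof is correct and follows essentially the same route as the paper. The paper's argument for $(i)$ is identical to yours---the Kronecker delta $\delta_{\sum l_i,0}$ kills the $k$-phase and one is left with a pure Epstein series $\sum_k' k_{\mu_1}\cdots k_{\mu_q}|k|^{-s-2q}$ whose only pole sits at $s=n-q>0$ (the paper cites Proposition~\ref{calculres} rather than Theorem~\ref{analytic}, but the content is the same). For $(ii)$ the paper does not explicitly split into diagonal and off-diagonal pieces: it writes the full expression $\sum_{l,l'}\delta_{L_1+L_2,0}\,e^{i\phi}\,\tilde a_l\tilde b_{l'}\sum_k' e^{ik.\Theta L_1} k_{\mu_1}\cdots k_{\mu_q}|k|^{-s-2q}$ and invokes Theorem~\ref{analytic}\,(ii) in one stroke, since that theorem already packages both the diagonal contribution (pole at $n-q$) and the off-diagonal entire extension together with the justification for exchanging the residue with the $l,l'$-sum. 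Your explicit decomposition is a legitimate and perhaps more transparent way to present the same mechanism.
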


\begin{proof}
$(i)$ Let us compute $$\ncint A^p(\epsilon
JAJ^{-1})^{p'}D^{-q}.$$ 
With $A=L(-i A_\a)\otimes \ga^\a$ and
$\epsilon J A
J^{-1}=R(i A_\a)\otimes \ga^\a$, we get
$$
A^p=L(-i A_{\a_1})\cdots L(-i A_{\a_p}) \otimes \ga^{\a_1}\cdots
\ga^{\a_p}
$$
and
$$
(\epsilon J A J^{-1})^{p'}=R(i A_{\a'_1})\cdots R(i A_{\a'_{p'}})
\otimes
\ga^{\a'_1}\cdots \ga^{\a'_{p'}}.
$$
We note $\wt a_{\a,l}:= a_{\a_1,l_1}\cdots a_{\a_p,l_p}$. Since
$$
L(-i A_{\a_1})\cdots L(-i A_{\a_p})R(i A_{\a'_1})\cdots R(i
A_{\a'_{p'}}) U_k= (-i)^p
\, i^{p'} \sum_{l,l'} \wt a_{\a,l} \, \wt a_{\a',l'} \, U_{l_1}\cdots
U_{l_p} U_k
U_{l'_{p'}}\cdots U_{l'_1},
$$
and
$$
U_{l_1}\cdots U_{l_p} U_k= U_k U_{l_1}\cdots U_{l_p} \, e^{-i(\sum_i
l_i).\Th k},
$$
we get, with
\begin{align*}
&U_{l,l'}:=U_{l_1}\cdots U_{l_p}U_{l'_{p'}}\cdots U_{l'_1},\\
&g_{\mu,\a,\a'}(s,k,l,l'):= e^{ik. \Th \sum_j l_j} \,
\tfrac{k_{\mu_{1}}\ldots
k_{\mu_{q}}}{\vert k \vert^{s+2q}} \, \wt a_{\a,l} \,  \wt a_{\a',l'},
\\
&\ga^{\a,\a',\mu}:=\ga^{\a_1}\cdots\ga^{\a_{p}}\ga^{\a'_1}\cdots
\ga^{\a'_{p'}} \ga^{\mu_1}\cdots \ga^{\mu_q},
\end{align*}
$$
A^p(\epsilon JAJ^{-1})^{p'}D^{-q}|D|^{-s} U_k\otimes e_i\sim_c (-i)^p
\, i^{p'}
\sum_{l,l'} g_{\mu,\a,\a'}(s,k,l,l') \, U_k U_{l,l'} \otimes
\ga^{\a,\a',\mu} e_i.
$$
Thus, $\ncint A^p(\epsilon JAJ^{-1})^{p'}D^{-q} =
\underset{s=0}{\Res}\ f(s)$
where
\begin{align*}
f(s):&=\Tr\big(A^p(\epsilon JAJ^{-1})^{p'}D^{-q}|D|^{-s}\big)\\
&\sim_c (-i)^p \, i^{p'} {\sum_{k\in\Z^{n}}}' \langle U_{k}\otimes
e_{i},\sum_{l,l'}
g_{\mu,\a,\a'}(s,k,l,l')  U_k
U_{l,l'} \otimes \ga^{\a,\a',\mu} e_i \rangle\\
&\sim_c (-i)^p \, i^{p'} {\sum_{k\in\Z^{n}}}' \, \,\tau\big(
\sum_{l,l'}
g_{\mu,\a,\a'}(s,k,l,l')
U_{l,l'} \big) \Tr(\ga^{\mu,\a,\a'})\\
&\sim_c (-i)^p \, i^{p'} {\sum_{k\in\Z^{n}}}' \sum_{l,l'}
g_{\mu,\a,\a'}(s,k,l,l') \, \tau
\big(U_{l,l'} \big) \Tr(\ga^{\mu,\a,\a'}).
\end{align*}
It is straightforward to check that the series
${\sum}'_{k,l,l'} g_{\mu,\a,\a'}(s,k,l,l') \, \tau\big( U_{l,l'}
\big)$
is absolutely summable if $\Re(s)>R$ for a $R>0$. Thus, we can
exchange the summation
on $k$ and $l,l'$, which gives
$$
f(s)\sim_c (-i)^p \, i^{p'} \sum_{l,l'}  {\sum_{k\in\Z^{n}}}'
g_{\mu,\a,\a'}(s,k,l,l') \,
\tau \big( U_{l,l'} \big) \Tr(\ga^{\mu,\a,\a'}).
$$
If we suppose now that $p'=0$, we see that,
$$
f(s)\sim_c  (-i)^p \sum_{l}  {\sum_{k\in\Z^{n}}}'   \,
\tfrac{k_{\mu_{1}}\ldots
k_{\mu_{q}}}{\vert k \vert^{s+2q}} \, \wt a_{\a,l} \,
\delta_{\sum_{i=1}^p l_i,0}
\Tr(\ga^{\mu,\a,\a'})
$$
which is, by Proposition \ref{calculres}, analytic at 0. In
particular, for
$p=q=1$, we see that $\ncint A D^{-1} =0$, i.e. the
vanishing tadpole
hypothesis is satisfied. Similarly, if we suppose $p=0$,
we get
$$
f(s)\sim_c  (-i)^{p'} \sum_{l'}  {\sum_{k\in\Z^{n}}}'   \,
\tfrac{k_{\mu_{1}}\ldots
k_{\mu_{q}}}{\vert k \vert^{s+2q}} \, \wt a_{\a,l'} \,
\delta_{\sum_{i=1}^{p'} {l'}_i,0}
\Tr(\ga^{\mu,\a,\a'})
$$
which is holomorphic at 0.

$(ii)$ Adapting the proof of Lemma \ref{ncint-odd-pdo} to our setting
(taking $q_i=0$, and 
adding gamma matrices components), we see that
$$
\ncint B\,D^{-q} =\underset{s=0}{\Res}\, {\sum_k}'\,\sum_{l,l'}
e^{i\phi(l,l')}\,\delta_{\sum_1^r l_i+l'_i,0}\, \wt a_{\a,l} \,\wt
b_{\beta,l'}
\,\tfrac{k_{\mu_1}\cdots k_{\mu_q}\,e^{-i\sum_1^r l_i.\Th
k}}{|k|^{s+2q}}
\Tr (\ga^{(\mu,\a,\beta)})
$$
where $\ga^{(\mu,\a,\beta)}$ is a complicated product of gamma matrices.
By Theorem \ref{analytic} $(ii)$, since we suppose here that
$\tfrac{1}{2\pi} \Th$ is
diophantine, this residue is 0.
\end{proof}

\subsubsection{Even dimensional case}
\begin{corollary}

Same hypothesis as in Lemma \ref{tadpole}.

(i) Case $n=2$:
    \begin{align*}
\ncint A^q D^{-q}= -\delta_{q,2}\,4\pi  \,\tau\big(A_{\a} A^{\a}
\big) \,.
\end{align*}
    (ii) Case $n=4$: with the shorthand
$\delta_{\mu_1,\ldots,\mu_4}:=
\delta_{\mu_1\mu_2}\delta_{\mu_3\mu_4}+\delta_{\mu_1\mu_3}\delta_{\mu_2\mu_4}
+\delta_{\mu_1\mu_4}\delta_{\mu_2\mu_3}$,
\begin{align*}
\ncint A^q D^{-q}= \delta_{q,4}\,\tfrac{\pi^2}{12}
\tau\big(A_{\a_1}\cdots A_{\a_4}
\big)\Tr(\ga^{\a_1}\cdots\ga^{\a_{4}}\ga^{\mu_1}\cdots\ga^{\mu_4})
\delta_{\mu_1,\ldots,\mu_4} \,.
\end{align*}
\end{corollary}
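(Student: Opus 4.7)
The plan is to reduce both cases to a single residue calculation that was essentially done inside the proof of Lemma \ref{tadpole}. First, by Lemma \ref{tadpole}(i) applied with $p=q$, we have $\ncint A^q D^{-q}=0$ whenever $1\leq q<n$; and for $q>n$ the operator $A^q D^{-q}$ is trace-class (it lies in $OP^{-q}\subseteq \L^1(\H)$ for $q>n$), so its zeta function is regular at $s=0$ and the noncommutative integral again vanishes. Hence only the case $q=n$ contributes, and this is what the Kronecker deltas in the statement encode.

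For $q=n$, I would take the formula obtained in the proof of Lemma \ref{tadpole} with $p=n$, $p'=0$: after using $\tau(U_{l_1}\cdots U_{l_n})=e^{-\frac{i}{2}\sum_{i<j}l_i.\Theta l_j}\,\delta_{\sum l_i,0}$ the phase $e^{ik.\Theta\sum l_j}$ becomes trivial on the support of the Kronecker delta, so the $k$--sum factors out and one gets
\begin{align*}
\ncint A^n D^{-n}=(-i)^n\,\Tr(\ga^{\alpha_1}\cdots\ga^{\alpha_n}\ga^{\mu_1}\cdots\ga^{\mu_n})\,\tau(A_{\alpha_1}\cdots A_{\alpha_n})\,C_{\mu_1,\ldots,\mu_n},
\end{align*}
where $C_{\mu_1,\ldots,\mu_n}:=\underset{s=0}{\Res}\,{\sum}'_{k}\frac{k_{\mu_1}\cdots k_{\mu_n}}{|k|^{s+2n}}$ and we have recognized $\sum_{l:\sum l_i=0}e^{-\frac{i}{2}\sum_{i<j}l_i.\Theta l_j}a_{\alpha_1,l_1}\cdots a_{\alpha_n,l_n}=\tau(A_{\alpha_1}\cdots A_{\alpha_n})$.

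The residue $C_{\mu_1,\ldots,\mu_n}$ is evaluated by Proposition \ref{calculres} (as already used in Lemma \ref{ncint-odd-pdo}), giving $C_{\mu_1,\ldots,\mu_n}=\int_{S^{n-1}}u_{\mu_1}\cdots u_{\mu_n}\,dS(u)$. The standard closed form $\int_{S^{n-1}}u_{\mu_1}\cdots u_{\mu_{2k}}\,dS=\frac{\mathrm{Vol}(S^{n-1})}{n(n+2)\cdots(n+2k-2)}\,\delta_{\mu_1,\ldots,\mu_{2k}}$ specializes to $\pi\,\delta_{\mu_1\mu_2}$ for $n=2$ and to $\frac{\pi^2}{12}\,\delta_{\mu_1,\ldots,\mu_4}$ for $n=4$, the latter being exactly the symmetrized Kronecker combination appearing in the statement.

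Finally one contracts with the Dirac trace. For $n=2$, $(-i)^2=-1$ and $\sum_\mu \ga^\mu\ga^\mu=2\cdot\unb$, so $\delta_{\mu_1\mu_2}\Tr(\ga^{\alpha_1}\ga^{\alpha_2}\ga^{\mu_1}\ga^{\mu_2})=2\Tr(\ga^{\alpha_1}\ga^{\alpha_2})=4\,\delta^{\alpha_1\alpha_2}$, producing $-4\pi\,\tau(A_\alpha A^\alpha)$. For $n=4$, $(-i)^4=+1$ and the three-term contraction $\delta_{\mu_1,\ldots,\mu_4}\Tr(\ga^{\alpha_1}\cdots\ga^{\alpha_4}\ga^{\mu_1}\cdots\ga^{\mu_4})$ is left in the stated form. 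The only mild obstacle is book-keeping: one must check that all phases coming from $U_l U_{l'}=e^{-\frac{i}{2}l.\Theta l'}U_{l+l'}$ collapse precisely so that $\tau(A_{\alpha_1}\cdots A_{\alpha_n})$ emerges, which is immediate once the constraint $\sum l_i=0$ kills the $k$-dependent phase.
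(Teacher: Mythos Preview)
Your proof is correct and follows essentially the same approach as the paper's: both reduce to the computation in Lemma~\ref{tadpole}$(i)$ with $p'=0$, $p=q=n$, and then invoke Proposition~\ref{calculres} to evaluate the residue ${\sum}'_k k_{\mu_1}\cdots k_{\mu_n}/|k|^{s+2n}$. You have simply filled in more of the book-keeping (the phase collapse giving $\tau(A_{\a_1}\cdots A_{\a_n})$, the explicit residue values, and the $n=2$ gamma contraction) that the paper leaves implicit. One small remark: your parenthetical that Proposition~\ref{calculres} was ``already used in Lemma~\ref{ncint-odd-pdo}'' is slightly off---that lemma invokes Theorem~\ref{analytic} rather than Proposition~\ref{calculres}---but this is a citation slip, not a mathematical issue.
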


\begin{proof}
$(i,ii)$ The same computation as in Lemma \ref{tadpole} $(i)$ (with
$p'=0$, $p=q=n$) gives
\begin{align*}
\ncint A^n D^{-n}=\underset{s=0}{\Res}(-i)^{n}
\big({\sum_{k\in\Z^{n}}}'\tfrac{k_{\mu_{1}}\ldots k_{\mu_{n}}}{\vert k
    \vert^{s+2n}}\big) \, \tau\big(\sum_{l\in (\Z^n)^n}
\wt a_{\a,l} U_{l_1}\cdots U_{l_{n}}
    \big) \,
\Tr(\ga^{\a_1}\cdots\ga^{\a_{n}}\ga^{\mu_1}\cdots\ga^{\mu_n})
\end{align*}
and the result follows from Proposition \ref{calculres}.
\end{proof}

We will use few notations:

 If $n\in \N$, $q\geq 2$, $l:=(l_1,\cdots,l_{q-1})\in
(\Z^n)^{q-1}$, $\a:=(\a_1,\cdots,\a_q)\in \{1,\cdots,n\}^q$, $k\in
\Z^n \backslash
\{0\}$, $\sigma\in \{-,+\}^q$, $(a_i)_{1\leq i\leq n}\in
(\mathcal{S}(\Z^n))^n$,
\begin{align*}
&l_q:=-\sum_{1\leq j\leq q-1} l_j \, , \quad
\lambda_\sigma:=(-i)^q\prod_{j=1\ldots
q}\sigma_j \, , \quad
\wt a_{\a,l}:= a_{\a_1,l_1}\ldots a_{\a_q,l_q}\,,\\
& \phi_\sigma(k,l):=\sum_{1\leq j\leq q-1} (\sigma_j-\sigma_q)\,
k.\Th l_j +
\sum_{2\leq
j\leq q-1} \sigma_j \, (l_1+\ldots +l_{j-1}).\Th l_j \, ,\\
& g_\mu(s,k,l):=\tfrac{k_{\mu_1}(k+l_1)_{\mu_2}\ldots (k+l_1+\ldots
+l_{q-1})_{\mu_q}}{|k|^{s+2}|k+l_1|^2\ldots|k+l_1+\ldots+l_{q-1}|^2}
\, ,
\end{align*}
with the convention $\sum_{2\leq j\leq q-1} = 0$ when $q=2$, and
$g_\mu(s,k,l)=0$
whenever $\wh l_i=-k$ for a $1\leq i\leq q-1$.

\begin{lemma}\label{formegenerale}
Let $A= L(-iA_{\a})\otimes \gamma^{\a}=-i\sum_{l\in \Z^n}a_{\alpha,l}
\, U_{l}\otimes
\ga^{\alpha}$ where $A_{\a}=-A_{\a}^*\in \A_{\Th}$ and
$\set{a_{\alpha,l}}_{l}\in \SS
(\Z^n)$, with $n\in \N$, be a hermitian one-form, and let $2\leq q
\leq n$, $\sigma\in
\{-,+\}^q$.

Then, $\ncint \mathbb{A}^{\sigma}= \underset{s=0}{\Res}\ f(s)$ where
$$
f(s):=\sum_{l\in (\Z^{n})^{q-1}} {\sum_{k\in\Z^n}}' \,
\lambda_\sigma\  e^{\tfrac i2
\phi_\sigma(k,l)}\ g_\mu(s,k,l)\ \wt a_{\a,l}\
\Tr(\ga^{\a_q}\ga^{\mu_q}\cdots\ga^{\a_1}\ga^{\mu_1}).
$$
\end{lemma}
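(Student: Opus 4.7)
The plan is to expand $\Tr(\mathbb{A}^\sigma |D|^{-s})$ on the orthonormal basis $\{U_k \otimes e_i\}$ of $\H$ and read off the residue at $s=0$. Since $P_0 \in OP^{-\infty}$, the $k=0$ contribution is $s$-independent and drops out under $\underset{s=0}{\Res}$ (this is what $\sim_c$ tracks), so I may restrict to $k\neq 0$, where $\DD D^{-2}(U_k\otimes e_i) = |k|^{-2} k_\mu\, U_k \otimes \gamma^\mu e_i$. Using $L(U_l)U_k = e^{-\frac{i}{2} l.\Th k}U_{l+k}$ for $\mathbb{A}^+$, and the identity $\epsilon JAJ^{-1} = iR(A_\alpha)\otimes\gamma^\alpha$ together with $R(U_l)U_k = e^{-\frac{i}{2} k.\Th l}U_{k+l}$ for $\mathbb{A}^-$, the two cases unify (after using skew-symmetry of $\Th$) into
\[
\mathbb{A}^{\sigma_j}(U_k\otimes e_i) = -i\sigma_j\sum_{l_j\in\Z^n} a_{\alpha_j,l_j}\, e^{\sigma_j\frac{i}{2} k.\Th l_j}\, |k|^{-2}\, k_{\mu_j}\, U_{k+l_j}\otimes \gamma^{\alpha_j}\gamma^{\mu_j} e_i.
\]

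Next I iterate this formula $q$ times. After applying $\mathbb{A}^{\sigma_{j-1}}\cdots\mathbb{A}^{\sigma_1}$ the unitary index is $k+l_1+\cdots+l_{j-1}$, so the next application contributes the factor $|k+l_1+\cdots+l_{j-1}|^{-2}(k+l_1+\cdots+l_{j-1})_{\mu_j}$ together with the phase $e^{\sigma_j\frac{i}{2}(k+l_1+\cdots+l_{j-1}).\Th l_j}$. Taking the trace $\sum_k\sum_i \langle U_k\otimes e_i, \mathbb{A}^\sigma|D|^{-s}(U_k\otimes e_i)\rangle$ forces $k + l_1+\cdots+l_q = k$, i.e.\ $l_q = -\sum_{j=1}^{q-1} l_j$, which is exactly the convention of the lemma. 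The overall scalar prefactor is $\lambda_\sigma = (-i)^q\prod_j \sigma_j$, the product of amplitudes is $\tilde a_{\alpha,l}$, the denominator/numerator combine into $g_\mu(s,k,l)$ after including the $|k|^{-s}$ from $|D|^{-s}$, and the Clifford part gives $\Tr(\gamma^{\alpha_q}\gamma^{\mu_q}\cdots\gamma^{\alpha_1}\gamma^{\mu_1})$.

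It remains to simplify the total phase $\tfrac{1}{2}\sum_{j=1}^q \sigma_j\, k.\Th l_j + \tfrac{1}{2}\sum_{j=2}^q \sigma_j\sum_{m<j} l_m.\Th l_j$. Substituting $l_q = -\sum_{j=1}^{q-1} l_j$ in the first sum immediately gives $\tfrac{1}{2}\sum_{j=1}^{q-1}(\sigma_j-\sigma_q)\,k.\Th l_j$. In the second sum the term $j=q$ reads $\sigma_q \sum_{m=1}^{q-1} l_m.\Th l_q = -\sigma_q \sum_{m,j'=1}^{q-1} l_m.\Th l_{j'}$, which vanishes by skew-symmetry of $\Th$; so the second sum reduces to $j=2,\ldots,q-1$ and we recover exactly $\phi_\sigma(k,l)$. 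The main obstacle is bookkeeping—correctly tracking the $q$ nested shifts $k\mapsto k+l_1+\cdots+l_j$ in both the phases and the momentum/norm factors, and justifying the interchange of the $k$-sum with the $l$-sum, which is legitimate for $\Re(s)$ large by Schwartz decay of $\{a_{\alpha_j,l_j}\}$ and then extended by meromorphic continuation (as in Theorem \ref{analytic} used elsewhere in the chapter). Once this is organized, the displayed identity $\ncint \mathbb{A}^\sigma = \underset{s=0}{\Res}\, f(s)$ follows.
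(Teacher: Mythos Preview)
Your proof is correct and follows essentially the same approach as the paper: compute the action of $\mathbb{A}^{\sigma_j}$ on $U_r\otimes v$, iterate $q$ times, take the trace to force $l_q=-\sum_{j<q}l_j$, simplify the phase using skew-symmetry of $\Th$, and then exchange the $k$- and $l$-sums. The only minor difference is that the paper justifies the interchange by invoking Lemma~\ref{abs-som} (absolute summability for $\Re(s)$ large) rather than Theorem~\ref{analytic}; your reference to Schwartz decay is the same idea but the precise citation is Lemma~\ref{abs-som}.
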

\begin{proof}
By definition, $\ncint \mathbb{A}^{\sigma}= \underset{s=0}{\Res}\
f(s)$ where
$$
\Tr(\mathbb{A}^{\sigma_q}\cdots \mathbb{A}^{\sigma_1}
|D|^{-s})\sim_c {\sum_{k\in\Z^n}}' \langle U_k\otimes
e^i ,|k|^{-s}\,\mathbb{A}^{\sigma_q}\cdots \mathbb{A}^{\sigma_1}
U_k\otimes e_i\rangle =: f(s).
$$
Let $r\in \Z^n$ and $v\in \C^{2^m}$. Since $A=L(-i A_\a)\otimes
\ga^\a$, and $\epsilon
JAJ^{-1}=R(i A_\a)\otimes \ga^{\a}$, we get
\begin{align*}
\mathbb{A}^{+}U_r\otimes v &= A\DD D^{-2} U_r\otimes
v=A\tfrac{r_\mu}{|r|^2+\delta_{r,0}} U_r \otimes
\ga^{\mu}v=-i\tfrac{r_\mu}{|r|^2+\delta_{r,0}} A_\a U_r \otimes
\ga^\a\ga^{\mu}v\,
,  \\
\mathbb{A}^{-}U_r\otimes v &=\epsilon JAJ^{-1}\DD D^{-2} U_r\otimes
v=\epsilon
JAJ^{-1}\tfrac{r_\mu}{|r|^2+\delta_{r,0}} U_r \otimes
\ga^{\mu}v=i\tfrac{r_\mu}{|r|^2+\delta_{r,0}}U_r A_\a
\otimes \ga^\a\ga^{\mu}v.
\end{align*}
With $U_l U_r=e^{\tfrac i2 r.\Th l} U_{r+l}$ and $U_r U_l=e^{-\tfrac
i2 r.\Th l}
U_{r+l}$, we obtain, for any $1\leq j\leq q$,
$$
\mathbb{A}^{\sigma_j}U_r\otimes v=\sum_{l\in \Z^n} (-\sigma_j)
\, i\, e^{\sigma_j\, \tfrac i2
r.\Th l}\, \tfrac{r_\mu}{|r|^2+\delta_{r,0}}\, a_{\a,l} \, U_{r+l}
\otimes
\ga^{\a}\ga^{\mu}v.
$$
We now apply $q$ times this formula to get
$$
|k|^{-s} \mathbb{A}^{\sigma_q}\cdots \mathbb{A}^{\sigma_1}
U_k\otimes e_i = \lambda_\sigma \sum_{l\in
(\Z^{n})^q} e^{\tfrac i2 \phi_\sigma(k,l)}\ g_\mu(s,k,l)\ \wt
a_{\a,l}\ U_{k+\sum_j
l_j} \otimes \ga^{\a_q}\ga^{\mu_q}\cdots\ga^{\a_1}\ga^{\mu_1}e_i
$$
with
\begin{align*}
\phi_\sigma(k,l)&:=\sigma_1\,  k.\Th l_1+\sigma_2 \, (k+l_1).\Th
l_2+\ldots
+\sigma_q \, (k+l_1+\ldots+l_{q-1}).\Th l_q.
\end{align*}
Thus,
\begin{align*}
f(s) &= {\sum_{k\in\Z^n}}' \, \tau \big(\lambda_\sigma
\sum_{l\in (\Z^{n})^q} e^{\tfrac i2
\phi_\sigma(k,l)}\ g_\mu(s,k,l)\ \wt a_{\a,l}\ U_{\sum_j
l_j}e^{\tfrac i2 k.\Th \sum_j
l_j}\big) \Tr(\ga^{\a_q}\ga^{\mu_q}\cdots\ga^{\a_1}\ga^{\mu_1})\\
&={\sum_{k\in\Z^n}}' \,  \lambda_\sigma \sum_{l\in (\Z^{n})^q}
e^{\tfrac i2
\phi_\sigma(k,l)}\ g_\mu(s,k,l)\ \wt a_{\a,l}\ \delta(\sum_j l_j)
\Tr(\ga^{\a_q}\ga^{\mu_q}\cdots\ga^{\a_1}\ga^{\mu_1})\\
&= {\sum_{k\in\Z^n}}' \,  \lambda_\sigma \sum_{l\in (\Z^{n})^{q-1}}
e^{\tfrac i2
\phi_\sigma(k,l)}\ g_\mu(s,k,l)\ \wt a_{\a,l}\
\Tr(\ga^{\a_q}\ga^{\mu_q}\cdots\ga^{\a_1}\ga^{\mu_1})
\end{align*}
where in the last sum $l_q$ is fixed to $-\sum_{1\leq j\leq q-1} l_j$
and thus,
$$
\phi_\sigma(k,l)=\sum_{1\leq j\leq q-1} (\sigma_j-\sigma_q) \, k.\Th
l_j + \sum_{2\leq
j\leq q-1} \sigma_j \, (l_1+\ldots +l_{j-1}).\Th l_j.
$$
By Lemma \ref{abs-som}, there exists a $R>0$ such that for any $s\in
\C$ with
$\Re(s)>R$, the family
$$\big(e^{\tfrac i2 \phi_\sigma(k,l)}\
g_\mu(s,k,l)\ \wt a_{\a,l}\big)_{(k,l)\in (\Z^n \setminus
\set{0})\times (\Z^{n})^{q-1}}$$
is absolutely summable as a linear combination of families of the
type considered in that lemma. As a consequence, we can
exchange the summations on $k$ and $l$, which gives the result.
\end{proof}

In the following, we will use the shorthand
$$
c:=\tfrac{4\pi^{2}}{3}.
$$

\begin{lemma}\label{Termaterm} Suppose $n=4$. Then, with the same
hypothesis of Lemma
\ref{formegenerale},
\begin{align*}
\hspace{-2cm}\text{(i)} \quad \quad &
\tfrac 12 \ncint (\mathbb A^+)^2= \tfrac 12 \ncint (\mathbb A^-)^2= c
\,
\sum_{l\in\Z^4} \, a_{\alpha_{1},l} \, a_{\alpha_{2},-l} \,
\big(l^{\alpha_{1}}l^{\alpha_{2}}
- \delta^{\alpha_{1}\alpha_{2}} \vert l \vert^2\big).\\
\hspace{-2cm}\text{(ii)} \quad  \quad & \hspace{-0.45cm}
-\tfrac 13\ \ncint (\mathbb A^+)^3=-\tfrac 13 \ncint (\mathbb
A^-)^3=4c
\,\sum_{l_i \in \Z^4}
a_{\a_3,-l_1-l_2}\,a^{\a_1}_{l_2}\,a_{\a_1,l_1}\ \sin \tfrac{l_1.\Th
l_2}{2}\,l_1^{\a_3}.\\
\hspace{-1cm}\text{(iii)} \quad \quad &
\tfrac 14 \ncint (\mathbb A^+)^4=\tfrac 14 \ncint (\mathbb A^-)^4=
2c\,
\sum_{l_i \in \Z^4} a_{\alpha_{1},-l_1-l_2-l_3}\,
a_{{\alpha_{2}},l_3} \, a^{\alpha_{1}}_{l_2}
\, a^{\alpha_{2}}_{l_1} \sin \tfrac{l_1 .\Th (l_2+l_3)}{2}\, \sin
\tfrac{ l_2 .\Th
l_3}{2}.
\end{align*}
\vspace{-0.3cm}

(iv) Suppose $\tfrac {1}{2\pi}\Th$ diophantine. Then the
crossed terms in
$\ncint (\mathbb A^+ + \mathbb A^-)^q$ vanish:  if $C$ is the set of
all $\sigma\in \{-,+\}^q$ with $2\leq q\leq 4$, such that there exist
$i,j$ satisfying
$\sigma_i \neq \sigma_j$, we have $ \sum_{\sigma \in C} \, \ncint
\mathbb{A}^{\sigma} =0.
$
\end{lemma}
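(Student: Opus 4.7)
All four items flow from Lemma \ref{formegenerale}, which expresses each $\ncint \mathbb{A}^\sigma$ as $\underset{s=0}{\Res}\, f_\sigma(s)$ with
$$
f_\sigma(s) = \sum_{l\in (\Z^4)^{q-1}}{\sum_k}' \lambda_\sigma\, e^{i\phi_\sigma(k,l)/2}\, g_\mu(s,k,l)\, \wt a_{\alpha,l}\, \Tr(\gamma^{\alpha_q}\gamma^{\mu_q}\cdots\gamma^{\alpha_1}\gamma^{\mu_1}).
$$
The dichotomy between (i)--(iii) and (iv) is whether the $k$-dependent part of $\phi_\sigma$, namely $\sum_{j=1}^{q-1}(\sigma_j-\sigma_q)\,k.\Th l_j$, is trivial or not. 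By Lemma \ref{symetrie} it is enough to treat $\sigma=(+,\ldots,+)$ in (i)--(iii).

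\textbf{Uniform signs.} For $\sigma=(+,\ldots,+)$ every $\sigma_j-\sigma_q$ vanishes, so $\phi_\sigma(k,l)$ reduces to the $k$-independent quantity $\phi_+(l):=\sum_{2\le j\le q-1}(l_1+\cdots+l_{j-1}).\Th l_j$; no Diophantine hypothesis is needed. For each fixed $l$ I Taylor-expand the factors $|k+\hat l_i|^{-2}$ in inverse powers of $|k|$ around $|k|^{-2}$ and retain only the terms whose $k$-numerator has total degree $2(q-1)$, so that Proposition \ref{calculres} yields a pole at $s=0$ with residue given by spherical integrals $\int_{S^3}u_{\mu_1}\cdots u_{\mu_{2(q-1)}}\,dS(u)$ of totally symmetric tensors built out of $\delta$'s. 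Contracting these angular tensors with $\Tr(\gamma^{\alpha_q}\gamma^{\mu_q}\cdots\gamma^{\alpha_1}\gamma^{\mu_1})$, and simplifying using $\Tr(\gamma^a\gamma^b\gamma^c\gamma^d)=4(\delta^{ab}\delta^{cd}-\delta^{ac}\delta^{bd}+\delta^{ad}\delta^{bc})$ together with $\gamma^\mu\gamma^\alpha\gamma_\mu=-2\gamma^\alpha$ in dimension $n=4$, packages the $l$-dependence into the blocks $l^{\alpha_i}l^{\alpha_j}-\delta^{\alpha_i\alpha_j}|l|^2$ predicted in the statement. Case $q=2$ is immediate; for $q\in\{3,4\}$, the phase $e^{i\phi_+(l)/2}$ combined with the contribution obtained by the permutation of $l$-indices that swaps $\phi_+(l)\mapsto-\phi_+(l)$ assembles as $2i\sin(\cdots)$, yielding the single sine in (ii) and the product of two sines in (iii).

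\textbf{Mixed signs.} If $\sigma\in C$, at least one coefficient $\sigma_j-\sigma_q$ equals $\pm 2$, so $\phi_\sigma(k,l)$ contains a genuine $k$-dependent term $k.\Th\beta_\sigma(l)$ with $\beta_\sigma(l):=\sum_{j=1}^{q-1}(\sigma_j-\sigma_q)l_j$. Under the Diophantine hypothesis on $\tfrac{1}{2\pi}\Th$, Theorem \ref{analytic}(ii) forces the twisted $k$-series with nonzero frequency to be holomorphic at $s=0$, so only the resonant locus $\{l:\beta_\sigma(l)=0\}$ contributes to $\underset{s=0}{\Res}\, f_\sigma(s)$. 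Regrouping the sum $\sum_{\sigma\in C}\ncint \mathbb{A}^\sigma$ according to these resonant loci, the remaining terms can be rewritten, after reindexing the $l$-sums, as a combination $\sum \ncint B_\sigma D^{-q}$ where each $B_\sigma$ lies in the algebra generated by $\A,\,[\DD,\A],\,J\A J^{-1},\,J[\DD,\A]J^{-1}$; since $q<n$ in (iv), Lemma \ref{tadpole}(ii) applies and yields $\sum_{\sigma\in C}\ncint\mathbb{A}^\sigma=0$.

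\textbf{Main obstacle.} The heaviest step is the bookkeeping in (iii): three denominators, three Taylor corrections, and an eight-gamma trace must be juggled, keeping only the monomials whose total $k$-degree is $2(q-1)$ and which are even under $k\to-k$, and then reassembling the surviving pieces into the product $\sin\tfrac{l_1.\Th(l_2+l_3)}{2}\sin\tfrac{l_2.\Th l_3}{2}$. For (iv), the delicate point is identifying the residual resonant contributions with pseudodifferential operators of the form $BD^{-q}$ covered by Lemma \ref{tadpole}(ii); this is precisely where the Diophantine hypothesis on $\Th$ is essential, since without it extra lattice resonances would survive in the $l$-sum and spoil the cancellation.
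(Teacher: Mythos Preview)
Your approach to (i)--(iii) matches the paper's: start from Lemma~\ref{formegenerale}, Taylor-expand the propagators via~(\ref{trick-0}) and discard remainders satisfying~(H1), extract residues through Proposition~\ref{calculres}, contract with the gamma trace, and symmetrize in the $l$-indices to produce the sine factors. The description is schematic in places---for instance, in (iii) the product of two sines does not arise from a single ``swap $\phi_+\mapsto -\phi_+$'' permutation but from combining two distinct index permutations on the three terms $T_1,T_2,T_3$ and then applying a cosine-difference identity---but the architecture is correct.

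Part (iv), however, has a genuine gap. First, the assertion ``since $q<n$ in (iv)'' is false: the lemma covers $2\le q\le 4$ with $n=4$, and for $q=4$ the hypothesis of Lemma~\ref{tadpole}(ii) is violated. Concretely, $\Res_{s=0}\sum_k' k_{\mu_1}\cdots k_{\mu_4}\,|k|^{-s-8}\neq 0$ by~(\ref{formule2}), so the resonant contributions $g(\sigma)$ for $q=4$ are individually nonzero and only cancel upon summing over all crossed words; this cancellation must be exhibited and is not a consequence of any off-the-shelf vanishing lemma. Second, even for $q\in\{2,3\}$ the rewriting of the resonant-locus contribution as $\ncint B_\sigma D^{-q}$ with $B_\sigma$ in the order-zero algebra of Lemma~\ref{tadpole}(ii) is not substantiated: on $Z_\sigma$ the integrand $g_\mu(s,k,l)$ still carries shifted denominators $|k+\hat l_i|^2$ with $\hat l_i\neq 0$, and after Taylor expansion the $s=0$ residue comes from terms with extra $(k\cdot l)$ factors sitting over $|k|^{-s-2q-2}$, which is not the structure produced by any $\ncint BD^{-q}$. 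Indeed for $q=3$ the paper shows the six crossed $g(\sigma)$'s cancel in pairs via $g(-,-,+)=-g(+,+,-)$ etc., driven by $\lambda_\sigma=i\sigma_1\sigma_2\sigma_3$; were your argument valid each would vanish separately.

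The paper's actual route through (iv) is case-by-case: for $q=2$, the resonant locus is $\{l_1=0\}$ and the reduced zeta function is holomorphic at~$0$; for $q=3$, a sign pairing among the six crossed words; for $q=4$, reduction to seven words by Lemma~\ref{symetrie}, explicit computation of each residual kernel $K_\sigma(l_1,l_2,l_3)$, and two changes of variable in the $l$-sum to show $\sum_\sigma K_\sigma$ integrates to zero.
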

\begin{proof}
$(i)$ Lemma \ref{formegenerale} entails that
$\ncint \mathbb{A}^{++}=
\underset{s=0}{\Res} \sum_{l\in \Z^n} - f(s,l)$ where
$$
f(s,l):=  {\sum_{k\in\Z^n}}'
\tfrac{k_{\mu_1}(k+l)_{\mu_2}}{|k|^{s+2}|k+l|^2}\ \wt
a_{\a,l}\ \Tr(\ga^{\a_2}\ga^{\mu_2}\ga^{\a_1}\ga^{\mu_1}) \, \text{
and
} \, \wt a_{\a,l}:=a_{\alpha_1,l}\, a_{\alpha_2,-l} \,.
$$
We will now reduce the computation of the residue of an expression
involving terms
like $\vert k+l\vert^{2}$ in the denominator to the computation of
residues of zeta
functions. To proceed, we use (\ref{trick-0}) into an expression like
the one
appearing in $f(s,l)$. We see that the last term on the righthandside
yields a
$Z_{n}(s)$ while the first one is less divergent by one power of $k$.
If this is not
enough, we repeat this operation for the new factor of $\vert
k+l\vert^{2}$ in the
denominator. For $f(s,l)$, which is quadratically divergent at $s=0$,
we have to repeat
this operation three times before ending with a convergent result.
All the remaining
terms are expressible in terms of $Z_{n}$ functions. We get, using
three times
(\ref{trick-0}),
\begin{equation}\label{trick-1}
\tfrac{1}{|k+l|^2}=\tfrac{1}{|k|^2} - \tfrac{2k.l+|l|^2}{|k|^4} +
\tfrac{(2k.l+|l|^2)^2}{|k|^6} - \tfrac{(2k.l+|l|^2)^3}{|k|^6|k+l|^2}
\, .
\end{equation}
Let us define
$$
f_{\a,\mu}(s,l):={\sum_{k\in\Z^n}}'
\tfrac{k_{\mu_1}(k+l)_{\mu_2}}{|k|^{s+2}|k+l|^2}\ \wt a_{\a,l}
$$
so that $f(s,l)=
f_{\a,\mu}(s,l)\Tr(\ga^{\a_2}\ga^{\mu_2}\ga^{\a_1}\ga^{\mu_1})$.
Equation (\ref{trick-1}) gives
$$
f_{\a,\mu}(s,l) = f_1(s,l) - f_2(s,l) + f_3(s,l) - r(s,l)
$$
with obvious identifications. Note that the function
$$
r(s,l) ={\sum_{k\in\Z^n}}'
\tfrac{k_{\mu_1}(k+l)_{\mu_2}(2kl+|l|^2)^3}{|k|^{s+8}|k+l|^2}\ \wt
a_{\a,l}
$$
is a linear combination of functions of the type $H(s,l)$ satisfying
the hypothesis of
Corollary \ref{res-somH}. Thus, $r(s,l)$ satisfies (H1) and with the
previously seen
equivalence relation modulo functions satisfying this hypothesis we
get
$f_{\a,\mu}(s,l) \sim f_1(s,l) - f_2(s,l) + f_3(s,l)$.

Let's now compute $f_1(s,l)$.
$$
f_1(s,l) = {\sum_{k\in\Z^n}}'
\tfrac{k_{\mu_1}(k+l)_{\mu_2}}{|k|^{s+4}}\ \wt a_{\a,l}
= \wt a_{\a,l}\, {\sum_{k\in\Z^n}}'
\tfrac{k_{\mu_1}k_{\mu_2}}{|k|^{s+4}} + 0.
$$
Proposition \ref{res-int} entails that $s\mapsto {\sum_{k\in\Z^n}}'
\tfrac{k_{\mu_1}k_{\mu_2}}{|k|^{s+4}}$ is holomorphic at 0. Thus,
$f_1(s,l)$ satisfies
(H1), and $f_{\a,\mu}(s,l) \sim - f_2(s,l) + f_3(s,l)$.

Let's now compute $f_2(s,l)$ modulo (H1). We get, using several times
Proposition
\ref{res-int},
\begin{align*}
f_2(s,l)&={\sum_{k\in\Z^n}}'
\tfrac{k_{\mu_1}(k+l)_{\mu_2}(2kl+|l|^2)}{|k|^{s+6}}\ \wt
a_{\a,l} = {\sum_{k\in\Z^n}}' \tfrac{(2kl) k_{\mu_1} k_{\mu_2}+(2kl)
k_{\mu_1}
l_{\mu_2}+|l|^2 k_{\mu_1}
k_{\mu_2}+l_{\mu_2}|l|^2k_{\mu_1}}{|k|^{s+6}}\ \wt
a_{\a,l}\\
&\sim 0 + {\sum_{k\in\Z^n}}' \tfrac{(2kl) k_{\mu_1}
l_{\mu_2}}{|k|^{s+6}}\ \wt
a_{\a,l}+{\sum_{k\in\Z^n}}' \tfrac{|l|^2 k_{\mu_1}
k_{\mu_2}}{|k|^{s+6}}\ \wt
a_{\a,l}+0\, .
\end{align*}
Recall that ${\sum}'_{k\in \Z^n} \tfrac{k_ik_j}{|k|^{s+6}} =
\tfrac{\delta_{ij}}{n}
Z_n(s+4)$. Thus,
$$
f_2(s,l) \sim 2 l^i l_{\mu_2} \wt a_{\a,l} \tfrac{\delta_{i\mu_1}}{n}
Z_n(s+4)+
|l|^2\,\wt a_{\a,l}\, \tfrac{\delta_{\mu_1\mu_2}}{n} Z_n(s+4).
$$
Finally, let us compute $f_3(s,l)$ modulo (H1) following the same
principles:
\begin{align*}
f_3(s,l)&={\sum_{k\in\Z^n}}'
\tfrac{k_{\mu_1}(k+l)_{\mu_2}(2kl+|l|^2)^2}{|k|^{s+8}}\
\wt a_{\a,l}\\&= {\sum_{k\in\Z^n}}' \tfrac{(2kl)^2 k_{\mu_1}
k_{\mu_2} + (2kl)^2
k_{\mu_1} l_{\mu_2} + |l|^4 k_{\mu_1}k_{\mu_2} + |l|^4
k_{\mu_1}l_{\mu_2} +
(4kl)|l|^2k_{\mu_1}k_{\mu_2}+(4kl)|l|^2k_{\mu_1}l_{\mu_2}}{|k|^{s+8}}\
\wt
a_{\a,l}\\&\sim 4 l^{i}l^{j} \, {\sum_{k\in\Z^n}}'\tfrac{k_i k_j
k_{\mu_1}
k_{\mu_2}}{|k|^{s+8}} \wt a_{\a,l}+ 0.
\end{align*}
In conclusion,
$$
f_{\a,\mu}(s,l) \sim - \tfrac 14 (2 l_{\mu_1} l_{\mu_2} +|l|^2\,\,
\delta_{\mu_1\mu_2}) \wt a_{\a,l} Z_n(s+4) + 4 l^{i}l^{j}\,\wt
a_{\a,l}
{\sum_{k\in\Z^n}}'\tfrac{k_i k_j k_{\mu_1} k_{\mu_2}}{|k|^{s+8}}=:
g_{\a,\mu}(s,l).
$$
Proposition (\ref{res-int}) entails that $Z_n(s+4)$ and $s\mapsto
{\sum_{k\in\Z^n}}'\tfrac{k_i k_j k_{\mu_1} k_{\mu_2}}{|k|^{s+8}}$
extend
holomorphically in a punctured open disk centered at 0. Thus,
$g_{\a,\mu}(s,l)$
satisfies (H2) and we can apply Lemma \ref{res-som} to get
$$
-\ncint (\mathbb{A}^+)^2= \underset{s=0}{\Res} \sum_{l\in \Z^n}
f(s,l)= \sum_{l\in
\Z^n}\underset{s=0}{\Res}\ g_{\a,\mu}(s,l)
\Tr(\ga^{\a_2}\ga^{\mu_2}\ga^{\a_1}\ga^{\mu_1})=:\sum_{l\in
\Z^n}\underset{s=0}{\Res}\
g(s,l).
$$
The problem is now reduced to the computation of
$\underset{s=0}{\Res}\ g(s,l)$.
Recall that Res$_{s=0} \, Z_{4}(s+4)=2\pi^2$ by (\ref{formule}) or
(\ref{formule1}),
and
$$
{\rm Res}_{s=0}\,{\sum_{k\in\Z^{n}}}'\,\tfrac{k_{i}k_{j}k_{l}k_{m}}
{\vert
k\vert^{s+8}}=(\delta_{ij}\delta_{lm}+\delta_{il}\delta_{jm}
+\delta_{im}\delta_{jl})\,\tfrac{\pi^2}{12}.
$$
Thus,
$$
\underset{s=0}{\Res}\ g_{\a,\mu}(s,l) = -\tfrac{\pi^2}{3}\,\wt
a_{\a,l}\,(l_{\mu_1}l_{\mu_2}+\tfrac 12 |l|^2\delta_{\mu_1\mu_2}).
$$
We will use
\begin{align}
    \label{Wick}
\Tr(\ga^{\mu_{1}}\cdots\ga^{\mu_{2j}})=\Tr(1)\, \sum_{\text{all
pairings of }\set{1\cdots 2j}} s(P) \, \delta_{\mu_{P_{1}}\mu_{P_{2}}}
 \delta_{\mu_{P_{3}}\mu_{P_{4}}}\cdots
 \delta_{\mu_{P_{2j-1}}\mu_{P_{2j}}}
\end{align}
where $s(P)$ is the signature of the permutation $P$ when
$P_{2m-1}<P_{2m}$ for $1 \leq m \leq n$. This gives
\begin{align}
\label{Wick1}
\Tr(\ga^{\alpha_{2}}\ga^{\mu_2}\ga^{\alpha_{1}}\ga^{\mu_1}) = 2^{m}
(\delta^{\alpha_{2}\mu_2}\delta^{\alpha_{1}
\mu_1}-\delta^{\alpha_{1}\alpha_{2}}\delta^{\mu_2\mu_1}+
\delta^{\alpha_{2}\mu_1}\delta^{\mu_2\alpha_{1}}).
\end{align}
Thus,
\begin{align*}
\underset{s=0}{\Res}\ g(s,l)&= -c\,\wt
a_{\a,l}\,(l_{\mu_1}l_{\mu_2}+\tfrac 12
|l|^2\delta_{\mu_1\mu_2}) (\delta^{\alpha_{2}\mu_2}\delta^{\alpha_{1}
\mu_1}-\delta^{\alpha_{1}\alpha_{2}}\delta^{\mu_2\mu_1}+
\delta^{\alpha_{2}\mu_1}\delta^{\mu_2\alpha_{1}})\\
&=-2c\,\wt a_{\a,l}\,(l^{\a_1}l^{\a_2}-\delta^{\a_1\a_2} |l|^2).
\end{align*}
Finally,
$$
\tfrac 12 \ncint (\mathbb{A}^+)^2= \tfrac 12 \ncint (\mathbb{A}^-)^2=
c \,
\sum_{l\in\Z^n} \, a_{\alpha_{1},l} \, a_{\alpha_{2},-l} \,
\big(l^{\alpha_{1}}l^{\alpha_{2}} - \delta^{\alpha_{1}\alpha_{2}}
\vert l \vert
^2\big).
$$

$(ii)$ Lemma \ref{formegenerale} entails that $\ncint
\mathbb{A}^{+++}=
\underset{s=0} {\Res} \sum_{(l_1,l_2)\in (\Z^n)^2} f(s,l)$ where
\begin{align*}
f(s,l)&:=  {\sum_{k\in\Z^n}}' i\,e^{\tfrac i2 l_1\Th
l_2}\,\tfrac{k_{\mu_1}(k+l_1)_{\mu_2}(k+\wh
l_2)_{\mu_3}}{|k|^{s+2}|k+l_1|^2|k+\wh
l_2|^2}\ \wt a_{\a,l}
\Tr(\ga^{\a_3}\ga^{\mu_3}\ga^{\a_2}\ga^{\mu_2}\ga^{\a_1}\ga^{\mu_1})\\
&=:f_{\a,\mu}(s,l)\Tr(\ga^{\a_3}\ga^{\mu_3}\ga^{\a_2}\ga^{\mu_2}\ga^{\a_1}\ga^{\mu_1}),
\end{align*}
and $\wt a_{\a,l}:=a_{\alpha_1,l_1}\, a_{\alpha_2,l_2}\,
a_{\alpha_3,-\wh l_2}$ with
$\wh l_2:=l_1+l_2$.

We use the same technique as in $(i)$:
\begin{align*}
\tfrac{1}{|k+l_1|^2}&=\tfrac{1}{|k|^2} -
\tfrac{2k.l_1+|l_1|^2}{|k|^4} +
\tfrac{(2k.l_1+|l_1|^2)^2}{|k|^4|k+l_1|^2}\, , \\
\tfrac{1}{|k+\wh l_2|^2}&=\frac{1}{|k|^2} - \tfrac{2k.\wh l_2+|\wh
l_2|^2}{|k|^4} +
\tfrac{(2k.\wh l_2+|\wh l_2|^2)^2}{|k|^4|k+\wh l_2|^2}
\end{align*}
and thus,
\begin{equation}\label{trick-2}
\tfrac{1}{|k+l_1|^2|k+\wh l_2|^2}=\tfrac{1}{|k|^4}
-\tfrac{2k.l_1}{|k|^6}-\tfrac{2k.\wh l_2}{|k|^6} +R(k,l)
\end{equation}
where the remain $R(k,l)$ is a term of order at most $-6$ in $k$.
Equation
(\ref{trick-2}) gives
$$
f_{\a,\mu}(s,l) = f_1(s,l) + r(s,l)
$$
where $r(s,l)$ corresponds to $R(k,l)$. Note that the function
$$
r(s,l) ={\sum_{k\in\Z^n}}'i\,e^{\tfrac i2 l_1\Th l_2}\,
\tfrac{k_{\mu_1}(k+l)_{\mu_2}(k+\wh l_2)_{\mu_3}R(k,l)}{|k|^{s+2}}\
\wt a_{\a,l}
$$
is a linear combination of functions of the type $H(s,l)$ satisfying
the hypothesis of
Corollary (\ref{res-somH}). Thus, $r(s,l)$ satisfies (H1) and
$f_{\a,\mu}(s,l) \sim
f_1(s,l)$.

Let us compute $f_1(s,l)$ modulo (H1)
\begin{align*}
f_1(s,l) &= {\sum_{k\in\Z^n}}'i\,e^{\tfrac i2 l_1\Th l_2}\,
\tfrac{k_{\mu_1}(k+l_1)_{\mu_2}(k+\wh l_2)_{\mu_3}}{|k|^{s+6}}\ \wt
a_{\a,l} -
{\sum_{k\in\Z^n}}' i\,e^{\tfrac i2 l_1\Th
l_2}\,\tfrac{k_{\mu_1}(k+l_1)_{\mu_2}(k+\wh
l_2)_{\mu_3}(2k. l_1+2k.\wh
l_2)}{|k|^{s+8}}\ \wt a_{\a,l}\\
&\sim {\sum_{k\in\Z^n}}'i\,e^{\tfrac i2 l_1\Th
l_2}\,\tfrac{k_{\mu_1}k_{\mu_2} \wh
{l_2}_{\mu_3}+k_{\mu_1}k_{\mu_3}{ l_1}_{\mu_2}}{|k|^{s+6}}\ \wt
a_{\a,l}-
{\sum_{k\in\Z^n}}' i\,e^{\tfrac i2 l_1\Th
l_2}\,\tfrac{k_{\mu_1}k_{\mu_2}k_{\mu_3}(2k.l_1+2k.\wh
l_2)}{|k|^{s+8}}\
\wt a_{\a,l}\\
&= i\,e^{\tfrac i2 l_1\Th l_2}\,\wt a_{\a,l}\big( ({l_{1}}_{\mu_2}
\delta_{\mu_1
\mu_3}+\wh {l_2}_{\mu_3} \delta_{\mu_1 \mu_2}) \, \,\tfrac 14
Z_{4}(s+4) -
2(l_{1}^{i}+\wh l_{2}^{i}){\sum_{k\in\Z^{n}}}'
\tfrac{k_{\mu_1}k_{\mu_2}
k_{\mu_3}k_{i}}{\vert k\vert^{s+8}}\big)\\&=:g_{\a,\mu}(s,l).
\end{align*}
Since $g_{\a,\mu}(s,l)$ satisfies (H2), we can apply Lemma
\ref{res-som} to
get
\begin{align*}
\ncint (\mathbb{A^+})^3&= \underset{s=0}{\Res} \sum_{(l_1,l_2)\in
(\Z^n)^2} f(s,l)\\
&= \sum_{(l_1,l_2)\in (\Z^n)^2}\underset{s=0}{\Res}\ g_{\a,\mu}(s,l)
\Tr(\ga^{\a_3}\ga^{\mu_3}\ga^{\a_2}\ga^{\mu_2}\ga^{\a_1}\ga^{\mu_1})=:\sum_{l}
X_l.
\end{align*}
Recall that $l_3:=-l_1-l_2=-\wh l_2$. By (\ref{formule1}) and
(\ref{formule2}),
\begin{align*} \underset{s=0}{\Res}\ g_{\a,\mu}(s,l) &i\,e^{\tfrac i2
l_1\Th l_2}\,\wt a_{\a,l}\big(
2(-l_{1}^{i}+l_{3}^{i})\tfrac{\pi^2}{12} (\delta_{\mu_1
\mu_2}\delta_{\mu_3
i}+\delta_{\mu_1 \mu_3}\delta_{\mu_2 i}+\delta_{\mu_1 i}\delta_{\mu_2
\mu_3})\\ &+
({l_{1}}_{\mu_2}\delta_{\mu_1 \mu_3} -{l_{3}}_{\mu_3}\delta_{\mu_1
\mu_2}) \tfrac
{\pi^2}{2}\big).
\end{align*}
We decompose $X_l$ in five terms:
$X_l= 2^m\ \tfrac{\pi^2}{2} \ i\,e^{\tfrac i2 l_{1}\Th l_{2}}\, \wt
a_{\a,l}\,(T_1+T_2+T_3+T_4+T_5)$
where
\begin{align*}
T_0 &:= \tfrac 13 (-l_1^{i}+l_3^{i})(\delta_{\mu\nu}\delta_{\rho
i}+\delta_{\mu\rho}\delta_{\nu i}+\delta_{\mu i}\delta_{\nu\rho})
+{l_1}_\nu
\delta_{\mu\rho} -{l_3}_\rho \delta_{\mu\nu},\\
T_1&:=(\delta^{\a_3\rho}\delta^{\a_2\nu}\delta^{\a_1\mu}
-\delta^{\a_3\rho}\delta^{\a_2\a_1}\delta^{\mu\nu}+
\delta^{\a_3\rho}\delta^{\a_2\mu}\delta^{\a_1\nu})T_0,\\
T_2&:=(-\delta^{\a_2\a_3}\delta^{\rho\nu}\delta^{\a_1\mu}
+\delta^{\a_2\a_3}\delta^{\a_1\rho}\delta^{\mu\nu}-
\delta^{\a_2\a_3}\delta^{\rho\mu}\delta^{\a_1\nu})T_0,\\
T_3&:=(\delta^{\a_3\nu}\delta^{\a_2\rho}\delta^{\a_1\mu}
-\delta^{\a_3\nu}\delta^{\a_1\rho}\delta^{\a_2\mu}+
\delta^{\a_3\nu}\delta^{\rho\mu}\delta^{\a_1\a_2})T_0,\\
T_4&:=(-\delta^{\a_1\a_3}\delta^{\a_2\rho}\delta^{\mu\nu}
+\delta^{\a_1\a_3}\delta^{\rho\nu}\delta^{\a_2\mu}-
\delta^{\a_1\a_3}\delta^{\rho\mu}\delta^{\a_2\nu})T_0,\\
T_5&:=(\delta^{\a_3\mu}\delta^{\a_2\rho}\delta^{\a_1\nu}
-\delta^{\a_3\mu}\delta^{\rho\nu}\delta^{\a_1\a_2}+
\delta^{\a_3\mu}\delta^{\a_1\rho}\delta^{\a_2\nu})T_0 .
\end{align*}
With the shorthand $p:=-l_1-2l_3$, $q:=2l_1+l_3$,
$r:=-p-q=-l_1+l_3$, we compute each
$T_i$, and find
\begin{align*}
3T_1&=\delta^{\a_1\a_2}(2-2^m) p^{\a_3}
+ \delta^{\a_3\a_1} q^{\a_2}-\delta^{\a_2\a_1}
q^{\a_3}+\delta^{\a_3\a_2}q^{\a_1} + \delta^{\a_3\a_2}r^{\a_1}
-\delta^{\a_2\a_1}r^{\a_3}+\delta^{\a_3\a_1}r^{\a_2},\\
3T_2&=(2^m-2)\delta^{\a_2\a_3}p^{\a_1} -2^m \delta^{\a_2\a_3}q^{\a_1}
-2^m
\delta^{\a_2\a_3}r^{\a_1},\\
3T_3&=\delta^{\a_1\a_3}p^{\a_2}-\delta^{\a_2\a_3}p^{\a_1}
+\delta^{\a_1\a_2}p^{\a_3}+2^m
\delta^{\a_2\a_1}q^{\a_3}+\delta^{\a_3\a_2}r^{\a_1}-\delta^{\a_3\a_1}r^{\a_2}
+\delta^{\a_1\a_2}r^{\a_3},\\
3T_4&=-\delta^{\a_1\a_3}2^m p^{\a_2}-\delta^{\a_1\a_3}2^m q^{\a_2}
+\delta^{\a_1\a_3}(2^m-2)r^{\a_2},\\
3T_5&=\delta^{\a_1\a_3}p^{\a_2}-\delta^{\a_1\a_2}p^{\a_3}
+\delta^{\a_3\a_2}p^{\a_1}+\delta^{\a_3\a_2}q^{\a_1}
-\delta^{\a_1\a_2}q^{\a_3}+\delta^{\a_3\a_1}q^{\a_2}
+(2-2^m)\delta^{\a_1\a_2}r^{\a_3}.
\end{align*}
Thus,
\begin{equation}
X_l = 2^m\ \tfrac{2\pi^2}{3} i\, e^{\tfrac i2 l_{1}.\Th l_{2}}\, \wt
a_{\a,l}
\,(q^{\a_3}\delta^{\a_1\a_2}
+r^{\a_2}\delta^{\a_1\a_3}+p^{\a_1}\delta^{\a_2\a_3})
\label{formuleA3-1}
\end{equation}
and
\begin{align*}
\ncint (\mathbb{A}^+)^3=i\,2 c\,  (S_1+S_2+S_3),
\end{align*}
where $S_1$, $S_2$ and $S_3$ correspond to respectively
$q^{\a_3}\delta^{\a_1\a_2}$,
$r^{\a_2}\delta^{\a_1\a_3}$ and $p^{\a_1}\delta^{\a_2\a_3}$. In
$S_1$, we permute the
$l_i$ variables the following way: $l_1\mapsto l_3$, $l_2\mapsto
l_1$, $l_3\mapsto
l_2$. Therefore, $l_3.\Th\, l_1 \mapsto l_3.\Th\, l_1$ and $q \mapsto
r$. With a
similar permutation of the $\a_i$, we see that $S_1=S_2$. We apply
the same principles
to prove that $S_1=S_3$ (using permutation $l_1\mapsto l_2$,
$l_2\mapsto l_3$,
$l_3\mapsto l_1$). Thus,
$$
\tfrac 13\ \ncint (\mathbb{A}^+)^3= i\ 2c\ \sum_{l_i} \wt a_{\a,l}\,
 e^{\tfrac i2 l_{1}.\Th l_{2}}\ (l_1-l_2)^{\a_3}
\delta^{\a_1\a_2}= S_4-S_5,
$$
where $S_4$ correspond to $l_1$ and $S_5$ to $l_2$. We permute the
$l_i$ variables in
$S_5$ the following way: $l_1\mapsto l_2$, $l_2\mapsto l_1$,
$l_3\mapsto l_3$, with a
similar permutation on the $\a_i$. Since $l_1.\Th\, l_2 \mapsto
-l_1.\Th\, l_2$, we
finally get
$$
\tfrac 13\ \ncint (\mathbb{A}^+)^3=-4c \sum_{l_i}
a_{\a_1,l_1}\,a_{\a_2,l_2}\,
a_{\a_3,-l_1-l_2}\ \sin \tfrac{l_1.\Th l_2}{2}\ l_1^{\a_3} \,
\delta^{\a_1\a_2}.
$$

$(iii)$ Lemma \ref{formegenerale} entails that
$\ncint \mathbb{A}^{++++}= \underset{s=0}{\Res} \sum_{(l_1,l_2,l_3)\in
(\Z^n)^3} f_{\mu,\a}(s,l) \Tr\ga^{\mu,\a}$
where
\begin{align*}
&\theta:=l_1.\Th l_2+ l_1.\Th l_3 + l_2. \Th l_3,\\
&\Tr\ga^{\mu,\a}:= \Tr(\ga^{\a_4}\ga^{\mu_4}\ga^{\a_3}\ga^{\mu_3}
\ga^{\a_2}\ga^{\mu2}\ga^{\a_1}\ga^{\mu_1}),\\
&f_{\mu,\a}(s,l):= {\sum_{k\in\Z^n}}'e^{\tfrac i2 \theta}\,
\tfrac{k_{\mu_1}(k+l_1)_{\mu_2}(k+\wh l_2)_{\mu_3}(k+\wh
l_3)_{\mu_4}}{|k|^{s+2}|k+l_1|^2|k+\wh l_2|^2|k+\wh l_3|^2}\ \wt
a_{\a,l},\\
&\wt a_{\a,l}:=a_{\alpha_1,l_1}\, a_{\alpha_2,l_2}\,
a_{\alpha_3,l_3}\,a_{\alpha_4,-l_1-l_2-l_3}.
\end{align*}

Using (\ref{trick-0}) and Corollary \ref{res-somH} successively, we
find
$$
f_{\mu,\a}(s,l)\sim  {\sum_{k\in\Z^{n}}}'e^{\tfrac i2 \theta}\,
\tfrac{k_{\mu_1}k_{\mu_2}k_{\mu_3}k_{\mu_4}}{\vert k\vert^{s+2}\vert
k+l_{1}\vert^{2}{\vert k+l_{1}+l_{2}\vert^2} \vert k+l_1+l_2+l_3
\vert^2} \, \wt
a_{\a,l} \sim {\sum_{k\in\Z^{n}}}' e^{\tfrac i2 \theta}\,
\tfrac{k_{\mu_1}k_{\mu_2}k_{\mu_3}k_{\mu_4}}{\vert k\vert^{s+8}} \,
\wt a_{\a,l}.
$$

Since the function ${\sum_{k\in\Z^{n}}}'e^{\tfrac i2 \theta}\,
\tfrac{k_{\mu_1}k_{\mu_2}k_{\mu_3}k_{\mu_4}}{\vert k\vert^{s+8}} \,
\wt a_{\a,l}$
satisfies (H2), Lemma \ref{res-som} entails that
$$
\ncint (\mathbb{A}^+)^4= \sum_{(l_1,l_2,l_3)\in (\Z^n)^3} e^{\tfrac
i2 \theta}\,\wt
a_{\a,l}\,\underset{s=0}{\Res}\ {\sum_{k\in\Z^{n}}}'
\tfrac{k_{\mu_1}k_{\mu_2}k_{\mu_3}k_{\mu_4}}{\vert k\vert^{s+8}}
\Tr\ga^{\mu,\a}
=:\sum_l X_l.
$$
Therefore, with (\ref{formule2}), we get
$X_l =\tfrac{\pi^2}{12}\, \wt a_{\a,l}\, e^{\tfrac i2 \th}\,
(A+B+C)$,
where
\begin{align*}
A&:=\Tr(\ga^{\a_4}\ga^{\mu_4}\ga^{\a_3}
\ga_{\mu_4}\ga^{\a_2}\ga^{\mu_2}\ga^{\a_1}\ga_{\mu_2}),\\
B&:=\Tr(\ga^{\a_4}\ga^{\mu_4}\ga^{\a_3}\ga^{\mu_2}\ga^{\a_2}\ga_{\mu_4}\ga^{\a_1}\ga_{\mu_2}),\\
C&:=\Tr(\ga^{\a_4}\ga^{\mu_4}\ga^{\a_3}
\ga_{\mu_2}\ga^{\a_2}\ga^{\mu_2}\ga^{\a_1}\ga_{\mu_4}).
\end{align*}
Using successively $\{\gamma^{\mu},\gamma^{\nu}\}=2\delta^{\mu\nu}$
and
$\gamma^\mu\gamma_\mu=2^m\ 1_{2^m}$, we see that
\begin{align*}
A&=C=4\ \Tr(\ga^{\a_4}\ga^{\a_3}\ga^{\a_2}\ga^{\a_1}),\\
B&=-4\ \big(\Tr(\ga^{\a_4}\ga^{\a_3}\ga^{\a_1}\ga^{\a_2}) +
\Tr(\ga^{\a_4}\ga^{\a_2}\ga^{\a_3}\ga^{\a_1})\big).
\end{align*}
Thus,
$
A+B+C=8\ 2^m \big( \delta^{\a_4\a_3}\delta^{\a_2\a_1}
+\delta^{\a_4\a_1}\delta^{\a_3\a_2}-2\delta^{\a_4\a_2}\delta^{\a_3\a_1}\big),
$
and
\begin{equation}
X_l =\tfrac{2\pi^2}{3}\ 2^m\ e^{\tfrac i2 \th}\, \wt a_{\a,l}\,  \big(
\delta^{\a_4\a_3}\delta^{\a_2\a_1} +\delta^{\a_4\a_1}\delta^{\a_3\a_2}
-2\delta^{\a_4\a_2}\delta^{\a_3\a_1}\big).\label{formuleA4c}
\end{equation}
By (\ref{formuleA4c}), we get
$$
\ncint (\mathbb{A}^+)^4 = 2c\ (-2T_1 +T_2+T_3),
$$
where
\begin{align*}
T_1&:=\sum_{l_1,\ldots,l_4}a_{\a_4,l_4}\,a_{\a_3,l_3}\, a_{\a_2,l_2}
\,a_{\a_1,l_1} \,
e^{\tfrac i2 \th}\
\delta_{0,\sum_i l_i}\ \delta^{\a_4\a_2}\,\delta^{\a_3\a_1},\\
T_2&:=\sum_{l_1,\ldots,l_4}a_{\a_4,l_4}\,
a_{\a_3,l_3}\,a_{\a_2,l_2}\,a_{\a_1,l_1} \,
e^{\tfrac i2 \th}\
\delta_{0,\sum_i l_i}\ \delta^{\a_4\a_3}\,\delta^{\a_2\a_1},\\
T_3&:=\sum_{l_1,\ldots,l_4}a_{\a_4,l_4}\,a_{\a_3,l_3}\,a_{\a_2,l_2}\,a_{\a_1,l_1}
\,
e^{\tfrac i2\th}\ \delta_{0,\sum_i l_i}\
\delta^{\a_4\a_1}\,\delta^{\a_3\a_2}.
\end{align*}
We now proceed to the following permutations
of the $l_i$ variables in the $T_1$ term :
$l_1\mapsto l_2$, $l_2 \mapsto l_1$,
$l_3 \mapsto l_4$, $l_4 \mapsto l_3$. While
$\sum_i l_i$ is invariant, $\th$ is modified :
$\th \mapsto l_2 .\Th l_1 + l_2 .\Th
l_4 + l_1 .\Th  l_4$. With $\delta_{0,\sum_i l_i}$
in factor, we can let $l_4$ be
$-l_1-l_2-l_3$, so that $\th \mapsto -\th$. We also
permute the $\a_i$ in the same
way. Thus,
$$
T_1=\sum_{l_1,\ldots,l_4}a_{\a_3,l_3}\,a_{\a_4,l_4} \, a_{\a_1,l_1}
\,a_{\a_2,l_2} \,
e^{-\tfrac i2 \th}\ \delta_{0,\sum_i l_i}\ \delta^{\a_3\a_1}
\,\delta^{\a_4\a_2}.
$$
Therefore,
\begin{equation}
2T_1 = 2\sum_{l_1,\ldots,l_4}a_{\a_4,l_4} \, a_{\a_3,l_3}\,
a_{\a_2,l_2}
\,a_{\a_1,l_1}\ \cos \tfrac{\th}{2}\ \delta_{0,\sum_i l_i}\
\delta^{\a_4\a_2} \,
\delta^{\a_3\a_1}\label{formuleT1}.
\end{equation}
The same principles are applied to $T_2$ and $T_3$.
Namely, the permutation
$l_1\mapsto l_1$, $l_2\mapsto l_3$, $l_3
\mapsto l_2$, $l_4 \mapsto l_4$ in $T_2$ and
the permutation $l_1\mapsto l_2$,
$l_2\mapsto l_3$, $l_3 \mapsto l_1$, $l_4 \mapsto
l_4$ in $T_3$ (the $\a_i$ variables are permuted the same way) give
\begin{align*}
T_2 &= \sum_{l_1,\ldots,l_4}a_{\a_4,l_4}a_{\a_3,l_3} a_{\a_2,l_2} \,
a_{\a_1,l_1} \,
e^{\tfrac i2 \phi}\
\delta_{0,\sum_i l_i}\ \delta^{\a_4\a_2}\,\delta^{\a_3\a_1},\\
T_3 &= \sum_{l_1,\ldots,l_4}a_{\a_4,l_4} \, a_{\a_3,l_3} a_{\a_2,l_2}
\, a_{\a_1,l_1}
\,e^{-\tfrac i2 \phi}\ \delta_{0,\sum_i l_i}\
\delta^{\a_4\a_2}\,\delta^{\a_3\a_1}
\end{align*}
where $\phi:=l_1 .\Th\ l_2 + l_1 .\Th\ l_3 - l_2 .\Th\ l_3$. Finally,
we get
\begin{align}
\ncint (\mathbb{A}^+)^4&=4c\ \sum_{l_1,\ldots,l_4}a_{{\a_{1}},l_4} \,
a_{{\a_{2}},l_3}\,
a^{\a_{1}}_{l_2} \, a^{\a_{2}}_{l_1} \, \delta_{0,\sum_i l_i}(\cos
\tfrac{\phi}{2}
-\cos \tfrac{\th}{2})\nonumber\\
&=8c\ \sum_{l_1,\ldots,l_3}a_{{\a_{1}},-l_1-l_2-l_3}\,
a_{{\a_{2}},l_3} \,
a^{\a_{1}}_{l_2} \, a^{\a_{2}}_{l_1} \sin \tfrac{l_1.\Th
(l_2+l_3)}{2}\ \sin \tfrac{
l_2 .\Th l_3}{2}.\label{formuleA4}
\end{align}

$(iv)$ Suppose $q=2$. By Lemma \ref{formegenerale}, we get
$$
\ncint \mathbb{A}^{\sigma}= \underset{s=0}{\Res} \sum_{l\in \Z^n}
\lambda_\sigma
f_{\a,\mu}(s,l) \Tr(\ga^{\a_2}\ga^{\mu_2}\ga^{\a_1}\ga^{\mu_1})
$$
where
$$
f_{\a,\mu}(s,l):=  {\sum_{k\in\Z^n}}'
\tfrac{k_{\mu_1}(k+l)_{\mu_2}}{|k|^{s+2}|k+l|^2}\,e^{i \eta \,  k.\Th
l}
\,\wt a_{\a,l}\,
$$
and $\eta:=\half (\sigma_1-\sigma_2) \in \{-1,1\}$. As in the proof of
$(i)$, since the presence of the phase does not change the
fact that $r(s,l)$ satisfies (H1), we get
$$
f_{\a,\mu}(s,l) \sim f_1(s,l) - f_2(s,l) + f_3(s,l)
$$
where
\begin{align*}
f_1(s,l) &= {\sum_{k\in\Z^n}}'
\tfrac{k_{\mu_1}(k+l)_{\mu_2}}{|k|^{s+4}}\,e^{i \eta \,
k.\Th l}\, \wt a_{\a,l},\\
f_2(s,l)&={\sum_{k\in\Z^n}}'
\tfrac{k_{\mu_1}(k+l)_{\mu_2}(2k.l+|l|^2)}{|k|^{s+6}}\,e^{i \eta \,
k.\Th l} \,  \wt
a_{\a,l} ,\\
f_3(s,l)&={\sum_{k\in\Z^n}}'
\tfrac{k_{\mu_1}(k+l)_{\mu_2}(2k.l+|l|^2)^2}{|k|^{s+8}}\,e^{i \eta \,
k.\Th l} \, \wt
a_{\a,l}.
\end{align*}
Suppose that $l=0$. Then $f_2(s,0)=f_3(s,0)=0$ and Proposition
\ref{res-int} entails
that
\begin{align*}
f_1(s,0)&= {\sum}_{k\in\Z^n}' \tfrac{k_{\mu_1}k_{\mu_2}}{|k|^{s+4}}\,
\wt a_{\a,0}
\end{align*}
is holomorphic at 0 and so is $f_{\a,\mu}(s,0)$.

Since $\tfrac {1}{2\pi}\Th$
is diophantine, Theorem \ref{analytic} $3$ gives us the result.

Suppose $q=3$. Then Lemma \ref{formegenerale} implies that
$$
\ncint \mathbb{A}^{\sigma}
= \underset{s=0}\Res\ {\sum}_{l\in (\Z^n)^{2}} \,  f_{\mu,\a}(s,l)\,
\Tr(\ga^{\mu_3}\ga^{\a_3}\cdots \ga^{\mu_1}\ga^{\a_1})
$$
where
$$
f_{\mu,\a}(s,l):= {\sum}'_{k\in \Z^n}\la_\sigma e^{ik.\Th(\eps_1
l_1+\eps_2 l_2)}
e^{\tfrac i2 \sigma_2 l_1.\Th
l_2}\tfrac{k_{\mu_1}(k+l_1)_{\mu_2}(k+l_1+
l_2)_{\mu_3}}{|k|^{s+2}|k+l_1|^2|k+l_1+l_{2}|^2}\,\wt a_{\a,l},
$$
and $\eps_i :=\half(\sigma_i - \sigma_3)\in \{-1,0,1\}$. By hypothesis
$(\eps_1,\eps_2)\neq (0,0)$. There are six possibilities for the
values of
$(\eps_1,\eps_2)$, corresponding to the six possibilities for the
values of $\sigma$:
$(-,-,+)$, $(-,+,+)$, $(+,-,+)$, $(+,+,-)$, $(-,+,-)$, and $(+,-,-)$.
As in $(ii)$, we
see that
\begin{align*}
f_{\mu,\a}(s,l)&\sim\big( {\sum_{k\in\Z^n}}' \tfrac{e^{ik.\Th(\eps_1
l_1+\eps_2
l_2)}k_{\mu_1}(k+l_1)_{\mu_2}(k+\wh l_2)_{\mu_3}}{|k|^{s+6}}\\
& \hspace{1cm}- {\sum_{k\in\Z^n}}'
\tfrac{e^{ik.\Th(\eps_1 l_1+\eps_2 l_2)}k_{\mu_1}(k+l_1)_{\mu_2}(k+\wh
l_2)_{\mu_3}(2k. l_1+2k.\wh l_2)}{|k|^{s+8}}\, \la_\sigma \,\wt
a_{\a,l}\,e^{\tfrac i2
\sigma_2 l_1.\Th l_2}.
\end{align*}
With $Z:=\{(l_1,l_2) \ : \ \eps_1 l_1 + \eps_2 l_2=0\}$, Theorem
\ref{analytic} $(iii)$
entails that $\sum_{l\in (\Z^n)^2\setminus Z}f_{\mu,\a}(s,l)$ is
holomorphic at 0.
To conclude we need to prove that
$$\sum_{\sigma} g(\sigma) := \sum_{\sigma} \sum_{l\in
Z}f_{\mu,\a}(s,l)\,
\Tr(\ga^{\mu_3}\ga^{\a_3}\cdots \ga^{\mu_1}\ga^{\a_1})$$ is
holomorphic at 0. By
definition, $\la_\sigma = i\sigma_1\sigma_2\sigma_3$
and as a consequence, we check that
\begin{align*}
g(-,-,+)=-g(+,+,-),\quad
g(+,-,+)=-g(+,-,-),\quad
g(-,+,+)=-g(-,+,-),
\end{align*}
which implies that $\sum_{\sigma} g(\sigma)=0$. The result follows.

Suppose finally that $q=4$. Again, Lemma \ref{formegenerale} implies
that
$$
\ncint \mathbb{A}^{\sigma} = \underset{s=0}\Res\ \sum_{l\in
(\Z^n)^{3}} f_{\mu,\a}(s,l)\,
\Tr(\ga^{\mu_4}\ga^{\a_4}\cdots \ga^{\mu_1}\ga^{\a_1})
$$
where
$$
f_{\mu,\a}(s,l):= {\sum_{k\in \Z^n}}'\la_\sigma\, e^{ik.\Th
\sum_{i=1}^{3}\eps_i l_i}
\, e^{\tfrac i2 (\sigma_2 l_1.\Th l_2+\sigma_3 (l_1+l_2).\Th
l_3)} \, \tfrac{k_{\mu_1}(k+l_1)_{\mu_2}(k+l_1+
l_2)_{\mu_3}(k+l_1+l_2+l_3)_{\mu_4}}
{|k|^{s+2}|k+l_1|^2|k+l_1+l_{2}|^2|k+l_1+l_2+l_3|^2}\, \wt
a_{\a,l}
$$
and $\eps_i :=\half(\sigma_i - \sigma_4)\in \{-1,0,1\}$. By hypothesis
$(\eps_1,\eps_2,\eps_3)\neq (0,0,0)$. There are fourteen
possibilities for the values
of $(\eps_1,\eps_2,\eps_3)$, corresponding to the fourteen
possibilities for the
values of $\sigma$: $(-,-,-,+)$, $(-,-,+,+)$, $(-,+,-,+)$,
$(+,-,-,+)$, $(-,+,+,+)$,
$(+,-,+,+)$, $(+,+,-,+)$, $(+,+,+,-)$, $(-,-,+,-)$, $(-,+,-,-)$,
$(+,-,-,-)$,
$(-,+,+,-)$, $(+,-,+,-)$ and $(+,+,-,-)$. As in $(ii)$, we see that,
with the
shorthand $\th_\sigma:=\sigma_2 l_1.\Th l_2+\sigma_3 (l_1+l_2).\Th
l_3$,
\begin{align*}
f_{\mu,\a}(s,l)\sim {\sum}'_{k\in \Z^n}\la_\sigma \, e^{ik.\Th
\sum_{i=1}^3\eps_i l_i}
\, e^{\tfrac i2 \th_\sigma} \,
\tfrac{k_{\mu_1}k_{\mu_2}k_{\mu_3}k_{\mu_4}}{|k|^{s+8}}\wt
a_{\a,l}=:g_{\mu,\a}(s,l)\, .
\end{align*}
With $Z_\sigma:=\{(l_1,l_2,l_3) \ : \ \sum_{i=1}^3\eps_i l_i=0\}$,
Theorem
\ref{analytic} $(iii)$, the series
$\sum_{l \in (\Z^n)^3 \setminus Z_\sigma}f_{\mu,\a}(s,l)$ is
holomorphic at 0. To conclude, we need to prove that
$$
\sum_{\sigma} g(\sigma) := \sum_{\sigma}\underset{s=0}\Res\
\sum_{l\in Z_\sigma}g_{\mu,\a}(s,l)\,
\Tr(\ga^{\mu_4}\ga^{\a_4}\cdots \ga^{\mu_1}\ga^{\a_1})=0.
$$
Let $C$ be the set of
the fourteen values of $\sigma$ and $C_7$ be the set of the seven
first values of
$\sigma$ given above. Lemma \ref{symetrie} implies
$$
\sum_{\sigma\in C} g(\sigma) = 2\sum_{\sigma\in C_7} g(\sigma).
$$
Thus, in the following, we restrict to these seven values. Let us note
$F_\mu(s):={\sum}'_{k\in
\Z^n}\tfrac{k_{\mu_1}k_{\mu_2}k_{\mu_3}k_{\mu_4}}{|k|^{s+8}}$
so that
$$
g(\sigma)=\underset{s=0} \Res \ F_\mu(s) \, \la_\sigma \, \sum_{l\in
Z_\sigma} e^{\tfrac i2
\th_\sigma} \, \wt a_{\a,l}\, \Tr(\ga^{\mu_4}\ga^{\a_4}\cdots
\ga^{\mu_1}\ga^{\a_1}).
$$
Recall from (\ref{formuleA4c}) that
$$
\underset{s=0}\Res\ F_\mu(s) \Tr(\ga^{\mu_4}\ga^{\a_4}\cdots
\ga^{\mu_1}\ga^{\a_1})
=2c \big( \delta^{\a_4\a_3}\delta^{\a_2\a_1}
+\delta^{\a_4\a_1}\delta^{\a_3\a_2}
-2\delta^{\a_4\a_2}\delta^{\a_3\a_1}\big).
$$
As a consequence, we get, with $\wt a_{\a,l}:= a_{\a_1,l_1}\cdots
a_{\a_4,l_4}$,
\begin{align*}
g(\sigma)&=2c\la_\sigma \sum_{l\in (\Z^n)^{4}} e^{\tfrac i2
\th_\sigma} \, \wt
a_{\a,l} \,\delta_{\sum_{i=1}^4 l_i,0} \,\delta_{\sum_{i=1}^3\eps_i
l_i,0}\big(
\delta^{\a_4\a_3}\delta^{\a_2\a_1} +\delta^{\a_4\a_1}\delta^{\a_3\a_2}
-2\delta^{\a_4\a_2}\delta^{\a_3\a_1}\big)\\
&=:2c\la_\sigma(T_1+T_2-2 T_3).
\end{align*}
We proceed to the following change of variable in $T_1$: $l_1\mapsto
l_1$, $l_2\mapsto
l_3$, $l_3\mapsto l_2$, $l_4\mapsto l_4$. Thus, we get
$\th_\sigma\mapsto
\psi_\sigma:=\sigma_2 l_1.\Th l_3+\sigma_3(l_1+l_3).\Th l_2$, and
$\sum_{i=1}^3\eps_i
l_i \mapsto \eps_1 l_1 + \eps_3 l_2 + \eps_2 l_3=:u_\sigma(l)$. With
a similar
permutation on the $\a_i$, we get
$$
T_1 =\sum_{l\in (\Z^n)^{4}} e^{\tfrac i2 \psi_\sigma}\,\wt
a_{\a,l}\,\delta_{\sum_{i=1}^4 l_i,0} \,\delta_{\eps_1 l_1 + \eps_3
l_2 + \eps_2
l_3,0}\, \delta^{\a_4\a_2}\delta^{\a_3\a_1}.
$$
We proceed to the following change of variable in $T_2$: $l_1\mapsto
l_2$, $l_2\mapsto
l_3$, $l_3\mapsto l_1$, $l_4\mapsto l_4$. Thus, we get
$\th_\sigma\mapsto
\phi_\sigma:=\sigma_2 l_2.\Th l_3+\sigma_3(l_2+l_3).\Th l_1$, and
$\sum_{i=1}^3\eps_i
l_i \mapsto \eps_3 l_1 + \eps_1 l_2 + \eps_2 l_3=:v_\sigma(l)$. After
a similar
permutation on the $\a_i$, we get
$$
T_2 ={\sum}_{l\in (\Z^n)^{4}} \, e^{\tfrac i2 \phi_\sigma}\,\wt
a_{\a,l}\,\delta_{\sum_{i=1}^4 l_i,0} \,\delta_{\eps_3 l_1 + \eps_1
l_2 + \eps_2
l_3,0} \, \delta^{\a_4\a_2}\delta^{\a_3\a_1}.
$$
Finally, we proceed to the following change of variable in $T_3$:
$l_1\mapsto l_2$,
$l_2\mapsto l_1$, $l_3\mapsto l_4$, $l_4\mapsto l_3$. Thus, we get
$\th_\sigma\mapsto
-\th_\sigma$, and $\sum_{i=1}^3\eps_i l_i \mapsto (\eps_2-\eps_3) l_1
+
(\eps_1-\eps_3) l_2 - \eps_3 l_3=:w_\sigma(l)$. With a similar
permutation on the
$\a_i$, we get
$$
T_3 ={\sum}_{l\in (\Z^n)^{4}}\, e^{-\tfrac i2 \th_\sigma}\,\wt
a_{\a,l}\,\delta_{\sum_{i=1}^4 l_i,0} \,\delta_{(\eps_2-\eps_3) l_1 +
(\eps_1-\eps_3)
l_2 - \eps_3 l_3,0}\delta^{\a_4\a_2}\delta^{\a_3\a_1}.
$$
As a consequence, we get
$$
g(\sigma)=2c {\sum}_{l\in (\Z^n)^{4}} K_\sigma(l_1,l_2,l_3)\,\wt
a_{\a,l}\,\delta_{\sum_{i=1}^4
l_i,0}\,\delta^{\a_4\a_2}\delta^{\a_3\a_1},
$$
where
$
K_\sigma(l_1,l_2,l_3)=\la_\sigma\big(e^{\tfrac i2
\psi_\sigma}\,\delta_{u_\sigma(l),0}+e^{\tfrac i2
\phi_\sigma}\,\delta_{v_\sigma(l),0}
-e^{\tfrac i2 \th_\sigma}\, \delta_{\sum_{i=1}^3\eps_i
l_i,0}-e^{-\tfrac i2
\th_\sigma}\, \delta_{w_\sigma(l),0}\big).
$

The computation of $K_\sigma(l_1,l_2,l_3)$ for the seven values of
$\sigma$ yields
\begin{align*}
K_{--++}(l_1,l_2,l_3)&=\delta_{l_1+l_3,0}+\delta_{l_2+l_3,0}-\delta_{l_1+l_2,0}-\delta_{l_1+l_2,0},\\
K_{-+-+}(l_1,l_2,l_3)&=\delta_{l_1+l_2,0}+\delta_{l_1+l_2,0}-\delta_{l_1+l_3,0}-\delta_{l_1+l_3,0},\\
K_{--++}(l_1,l_2,l_3)&=\delta_{l_2+l_3,0}+\delta_{l_1+l_3,0}-\delta_{l_2+l_3,0}-\delta_{l_2+l_3,0},\\
K_{---+}(l_1,l_2,l_3)&=-\big( e^{\tfrac i2 l_1.\Th l_2
}\delta_{\sum_{i=1}^3 l_i,0} +
e^{\tfrac i2 l_2.\Th l_1}\delta_{\sum_{i=1}^3 l_i,0}
- e^{\tfrac i2 l_2.\Th l_1} \delta_{\sum_{i=1}^3 l_i,0}- e^{\tfrac i2
l_1.\Th l_2} \delta_{l_3,0} \big),\\
K_{-+++}(l_1,l_2,l_3)&=-\big(e^{\tfrac i2 l_3.\Th l_2 }\delta_{l_1,0}
+ e^{\tfrac i2
l_3.\Th l_1}\delta_{l_2,0}
- e^{\tfrac i2 l_2.\Th l_3}\delta_{l_1,0} - e^{\tfrac i2 l_3.\Th
l_1}\delta_{l_2,0} \big),\\
K_{+-++}(l_1,l_2,l_3)&=-\big(e^{\tfrac i2 l_1.\Th l_2 }\delta_{l_3,0}
+ e^{\tfrac i2
l_2.\Th l_1} \delta_{l_3,0}-
e^{\tfrac i2 l_1.\Th l_3}\delta_{l_2,0} - e^{\tfrac i2 l_3.\Th
l_2}\delta_{l_1,0} \big),\\
K_{++-+}(l_1,l_2,l_3)&=-\big(e^{\tfrac i2 l_1.\Th l_3 }\delta_{l_2,0}
+ e^{\tfrac i2
l_2.\Th l_3}\delta_{l_1,0} - e^{\tfrac i2 l_1.\Th l_2}\delta_{l_3,0}
- e^{\tfrac i2
l_2.\Th l_1}\delta_{\sum_{i=1}^3 l_i,0} \big).
\end{align*}
Thus,
$$
\sum_{\sigma\in C_7} K_\sigma(l_1,l_2,l_3) =
2i(\delta_{l_3,0}-\delta_{\sum_{i=1}^3
l_i,0})\sin \tfrac {l_1.\Th l_2}{2}
$$
and
$$
\sum_{\sigma\in C_7} g(\sigma)=i 4c  \sum_{l\in (\Z^n)^{4}}
(\delta_{l_3,0}-\delta_{\sum_{i=1}^3 l_i,0})\sin \tfrac {l_1.\Th
l_2}{2}\,\wt
a_{\a,l}\,\delta_{\sum_{i=1}^4
l_i,0}\,\delta^{\a_4\a_2}\delta^{\a_3\a_1}.
$$
The following change of variables: $l_1\mapsto l_2$, $l_1\mapsto
l_2$, $l_3\mapsto
l_4$, $l_4\mapsto l_3$ gives
$$
\sum_{l\in (\Z^n)^{4}} \delta_{\sum_{1}^3 l_i,0}\sin \tfrac {l_1.\Th
l_2}{2}\,\wt
a_{\a,l}\,\delta_{\sum_{1}^4
l_i,0}\,\delta^{\a_4\a_2}\delta^{\a_3\a_1} =-\sum_{l\in
(\Z^n)^{4}} \delta_{l_3,0} \sin \tfrac {l_1.\Th l_2}{2}\,\wt
a_{\a,l}\,\delta_{\sum_{1}^4
l_i,0}\,\delta^{\a_4\a_2}\delta^{\a_3\a_1}
$$
so
$$
\sum_{\sigma\in C_7} g(\sigma)=i8c  \sum_{l\in (\Z^n)^{4}}
\delta_{l_3,0}\,\sin \tfrac
{l_1.\Th l_2}{2}\,\wt a_{\a,l}\,\delta_{\sum_{1}^4
l_i,0}\,\delta^{\a_4\a_2}\delta^{\a_3\a_1}.
$$
Finally, the change of variables: $l_2\mapsto l_4, l_4\mapsto l_2$
gives
$$
\sum_{l\in (\Z^n)^{4}} \delta_{l_3,0}\,\sin \tfrac {l_1.\Th
l_2}{2}\,\wt
a_{\a,l}\,\delta_{\sum_{1}^4
l_i,0}\,\delta^{\a_4\a_2}\delta^{\a_3\a_1} = -\sum_{l\in
(\Z^n)^{4}} \delta_{l_3,0}\,\sin \tfrac {l_1.\Th l_2}{2}\,\wt
a_{\a,l}\,\delta_{\sum_{1}^4
l_i,0}\,\delta^{\a_4\a_2}\delta^{\a_3\a_1}
$$
which entails that $\sum_{\sigma\in C_7} g(\sigma)=0$.
\end{proof}

\begin{lemma}
\label{double}
Suppose $n=4$ and $\tfrac{1}{2\pi}\Th$ diophantine. For any
self-adjoint one-form $A$,
$$
\zeta_{D_A}(0)-\zeta_{D}(0)=-c\,
\tau(F_{\a_1,\a_2}F^{\a_1\a_2}).
$$
\end{lemma}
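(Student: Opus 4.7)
The plan is to apply formula \eqref{termconstanttilde}, which for $n=4$ reads
$$\zeta_{D_A}(0) - \zeta_D(0) = \sum_{q=1}^{4}\tfrac{(-1)^q}{q}\ncint(\wt A D^{-1})^q,$$
and to reduce the right-hand side with the computations already carried out. By Lemma \ref{ncadmoins1} each term decomposes as $\ncint(\wt A D^{-1})^q = \sum_{\sigma\in\{+,-\}^q}\ncint\mathbb{A}^\sigma$. Part (iv) of Lemma \ref{Termaterm}, where the Diophantine hypothesis on $\tfrac{1}{2\pi}\Th$ is used, shows that the crossed (mixed-sign) contributions vanish; Lemma \ref{symetrie} gives $\ncint(\mathbb A^+)^q = \ncint(\mathbb A^-)^q$, so $\ncint(\wt A D^{-1})^q = 2\ncint(\mathbb A^+)^q$. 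For $q=1$ the tadpole is zero by Lemma \ref{tadpole}(i) (case $p=q=1 < n=4$). Substituting the closed forms given in Lemma \ref{Termaterm}(i)--(iii) yields $\zeta_{D_A}(0) - \zeta_D(0)$ as a sum of three explicit series homogeneous of degrees $2$, $3$, $4$ in the coefficients $a_{\a,l}$.

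It remains to expand $-c\,\tau(F_{\a_1\a_2}F^{\a_1\a_2})$ with $F_{\a_1\a_2} = \delta_{\a_1}A_{\a_2} - \delta_{\a_2}A_{\a_1} + [A_{\a_1}, A_{\a_2}]$ and match the three resulting pieces. Writing $A_\a = \sum_l a_{\a,l}U_l$ and using $\delta_\mu U_l = il_\mu U_l$ together with $\tau(U_{l_1}\cdots U_{l_p}) = \delta_{l_1+\cdots+l_p,0}\,e^{i\varphi(l)}$ coming from \eqref{rel1}, the quadratic part of $\tau(FF)$ evaluates immediately to $2\sum_l a_{\a_1,l}a_{\a_2,-l}(\delta^{\a_1\a_2}|l|^2 - l^{\a_1}l^{\a_2})$, which after multiplication by $-c$ reproduces the $q=2$ contribution. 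For the cubic part, cyclicity of $\tau$ and antisymmetry of $[A^{\a_1},A^{\a_2}]$ under $\a_1\leftrightarrow\a_2$ collapse it to $4\tau((\delta_{\a_1}A_{\a_2})[A^{\a_1},A^{\a_2}])$; the change of variable $l_2\mapsto -l_1-l_2$ in one of the two monomials, combined with $e^{-i\theta}-e^{i\theta}=-2i\sin\theta$, produces the sine factor of the $q=3$ contribution. For the quartic part, an analogous expansion of $\tau([A_{\a_1},A_{\a_2}][A^{\a_1},A^{\a_2}])$ produces four exponential monomials, and the identity $4\sin a\sin b = e^{i(a-b)}+e^{-i(a-b)}-e^{i(a+b)}-e^{-i(a+b)}$ allows the product of two sines in Lemma \ref{Termaterm}(iii) to be recognised after appropriate relabelings of the summation indices.

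The main obstacle is the bookkeeping in this quartic step: one must keep track of the four phase factors, the index contractions, and the sign changes produced by permutations of the $l_i$ (which rely on the antisymmetry of $\Th$), carefully enough to match the two sides. All the conceptual ingredients --- the vanishing tadpole, the equality $\ncint(\mathbb A^+)^q = \ncint(\mathbb A^-)^q$, the disappearance of the crossed terms, and the evaluations of the pure-sign residues --- are already established in the preceding lemmas, so the proof reduces entirely to this final algebraic identification.
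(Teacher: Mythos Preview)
Your proposal is correct and follows essentially the same approach as the paper: reduce via \eqref{termconstanttilde}, Lemma~\ref{ncadmoins1}, Lemma~\ref{symetrie}, and Lemma~\ref{Termaterm}~(iv) to $2\sum_q\tfrac{(-1)^q}{q}\ncint(\mathbb{A}^+)^q$, then match the $q=2,3,4$ contributions of Lemma~\ref{Termaterm}~(i)--(iii) against the quadratic, cubic, and quartic parts of $\tau(F_{\a_1\a_2}F^{\a_1\a_2})$. The paper's own proof leaves the quartic bookkeeping to the reader just as you do, so your honest flagging of that step as the main obstacle is appropriate and not a gap.
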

\begin{proof}
By (\ref{termconstanttilde}) and Lemma \ref{ncadmoins1} we get
$$
\zeta_{D_A}(0)-\zeta_{D}(0) =
\sum_{q=1}^n \tfrac {(-1)^q}{q} \sum_{\sigma\in \set{+,-}^q}\ncint
\mathbb{A}^\sigma.
$$
By Lemma \ref{Termaterm} $(iv)$, we see that the crossed terms all
vanish. Thus,
with Lemma \ref{symetrie}, we get
\begin{equation}
\label{n4eq1}
\zeta_{D_A}(0)-\zeta_{D}(0) = 2 \sum_{q=1}^n
\tfrac {(-1)^q}{q} \ncint (\mathbb{A}^+)^q.
\end{equation}

By definition,
\begin{align*}
F_{\alpha_{1} \alpha_{2}}&=i\sum_{k}\big(a_{\alpha_{2},k}\,
k_{\alpha_{1}}-a_{\alpha_{1},k}\, k_{\alpha_{2}}\big)U_{k}
+\sum_{k,\,l}a_{\alpha_{1},k}\,a_{\alpha_{2},l}\,[U_{k},U_{l}]\\
&=i\sum_{k}\big[(a_{\alpha_{2},k} \, k_{\alpha_{1}} -a_{\alpha_{1},k}
\,
k_{\alpha_{2}}) -2\sum_{l}a_{\alpha_{1},k-l}\,a_{\alpha_{2},l}\,
\sin(\tfrac{k.\Th
l}{2})\big] \, U_{k}.
\end{align*}
Thus
\begin{align*}
\tau(F_{\alpha_{1}\alpha_{2}} F^{\alpha_{1}\alpha_{2}})
&=\sum_{\alpha_{1},\,\alpha_{2}=1}^{2^m}\, \sum_{k\in \Z^4}\big[
(a_{\alpha_{2},k} \, k_{\alpha_{1}}-a_{\alpha_{1},k} \,
k_{\alpha_{2}})
-2\sum_{l'\in\Z^4}a_{\alpha_{1},k-l'}\,a_{\alpha_{2},l'}\,
\sin(\tfrac{k.\Th l'}{2})\big]\\
&\hspace{2.2cm} \big[(a_{\alpha_{2},-k} \, k_{\alpha_{1}}
-a_{\alpha_{1},-k}\, k_{\alpha_{2}})
-2\sum_{l"\in
\Z^4}a_{\alpha_{1},-k-l"}\,a_{\alpha_{2},l"}\,\sin(\tfrac{k.\Th
l"}{2})\big].
\end{align*}
One checks that the term in $a^q$ of $\tau(F_{\alpha_{1}\alpha_{2}}
F^{\alpha_{1}\alpha_{2}})$ corresponds to the term $\ncint
(\mathbb{A}^+)^q$ given by
Lemma \ref{Termaterm}. For $q=2$, this is
$$
-2\sum_{l\in\Z^4,\,\alpha_{1},\,\alpha_{2}} \, a_{\alpha_{1},l} \,
a_{\alpha_{2},-l}
\, \big(l_{\alpha_{1}}l_{\alpha_{2}} - \delta_{\alpha_{1}\alpha_{2}}
\vert l \vert
^2\big).
$$
\noindent For $q=3$, we compute the crossed terms:
$$
i\sum_{k,k',l} (a_{{\a_{2}},k} \, k_{\a_{1}} - a_{{\a_{1}},k} \,
k_{\a_{2}})\
a_{k'}^{\a_1}\ a_{l}^{\a_2} \big(U_k[U_{k'},l]+[U_{k'},U_l]U_k \big),
$$
which gives the following $a^3$-term in
$\tau(F_{\alpha_{1}\alpha_{2}} F^{\alpha_{1}\alpha_{2}})$
$$
-8\sum_{l_i} a_{\a_3,-l_1-l_2}\,a^{\a_1}_{l_2}\,a_{\a_1,l_1}\ \sin
\tfrac{l_1.\Th
l_2}{2}\,l_1^{\a_3} .
$$
\noindent For $q=4$, this is
$$-4\sum_{l_i}a_{\alpha_{1},-l_1-l_2-l_3}\,
a_{{\alpha_{2}},l_3} \, a^{\alpha_{1}}_{l_2} \, a^{\alpha_{2}}_{l_1}
\sin \tfrac{l_1
.\Th (l_2+l_3)}{2}\, \sin \tfrac{ l_2 .\Th l_3}{2}
$$
which corresponds to the term $\ncint (\mathbb{A}^+)^4$.
We get finally,
\begin{equation}\label{n4eq2}
\sum_{q=1}^n
\tfrac {(-1)^q}{q} \ncint (\mathbb{A}^+)^q
=- \tfrac c2 \tau(F_{\a_1,\a_2}F^{\a_1\a_2}).
\end{equation}
Equations (\ref{n4eq1}) and (\ref{n4eq2}) yield the result.
\end{proof}

\begin{lemma}
    \label{term-n=2}
Suppose $n=2$. Then, with the same hypothesis as in Lemma
\ref{formegenerale},
\begin{align*}
\hspace{-5.9cm}\text{(i)} \quad \quad &  \ncint (\mathbb
A^+)^2= \ncint (\mathbb A^-)^2=0.
\end{align*}
(ii) Suppose $\tfrac{1}{2\pi}\Th$ diophantine. Then
$$
\ncint \mathbb A^+ \mathbb A^{-}
=  \ncint \mathbb A^- \mathbb A^+=0.
$$
\end{lemma}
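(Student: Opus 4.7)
The plan is to adapt the arguments of Lemma~\ref{Termaterm}(i) and (iv) to dimension $n=2$, exploiting the crucial identity $\ga^\mu\ga^\a\ga_\mu=(2-n)\ga^\a$, which vanishes precisely when $n=2$. No genuine new computation is needed beyond tracking how the dimension-dependent identities collapse.

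For part (i), by Lemma~\ref{symetrie} it suffices to prove $\ncint(\mathbb{A}^+)^2=0$. Applying Lemma~\ref{formegenerale},
$$
\ncint(\mathbb{A}^+)^2 = -\underset{s=0}{\Res}\,\sum_{l\in\Z^2} f_{\a,\mu}(s,l)\,\Tr(\ga^{\a_2}\ga^{\mu_2}\ga^{\a_1}\ga^{\mu_1}),
$$
with $f_{\a,\mu}(s,l)=\wt a_{\a,l}\,{\sum_{k}}'\,k_{\mu_1}(k+l)_{\mu_2}/(|k|^{s+2}|k+l|^2)$ and $\wt a_{\a,l}=a_{\a_1,l}\,a_{\a_2,-l}$. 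In dimension two the summand decays more slowly at infinity than in dimension four, so only a \emph{single} expansion of $\tfrac{1}{|k+l|^2}=\tfrac{1}{|k|^2}-\tfrac{2k\cdot l+|l|^2}{|k|^2|k+l|^2}$ is required: the remainder term is of order $|k|^{-s-3}$ at infinity, hence already summable near $s=0$, and satisfies hypothesis (H1) via Corollary~\ref{res-somH}. Parity in $k$ kills the term proportional to $l_{\mu_2}$ in the leading piece, leaving $\tfrac{1}{2}\,\delta_{\mu_1\mu_2}\,Z_2(s+2)\,\wt a_{\a,l}$ with $\underset{s=0}{\Res}\,Z_2(s+2)=2\pi$. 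Lemma~\ref{res-som} then permits exchanging residue and summation over $l$. The contraction
$$
\delta_{\mu_1\mu_2}\,\Tr(\ga^{\a_2}\ga^{\mu_2}\ga^{\a_1}\ga^{\mu_1})=\Tr(\ga^{\a_2}\ga^{\mu}\ga^{\a_1}\ga_\mu)=(2-n)\,\Tr(\ga^{\a_1}\ga^{\a_2})
$$
vanishes identically when $n=2$. The case $\ncint(\mathbb{A}^-)^2$ follows at once from Lemma~\ref{symetrie}.

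For part (ii), the same analysis applies but with an extra oscillating phase. Lemma~\ref{formegenerale} gives
$$
\ncint\mathbb{A}^\sigma=\la_\sigma\,\underset{s=0}{\Res}\,\sum_{l\in\Z^2}\, e^{i\eta\,k\cdot\Th l}\, f_{\a,\mu}(s,l)\,\Tr(\ga^{\a_2}\ga^{\mu_2}\ga^{\a_1}\ga^{\mu_1}),
$$
for crossed $\sigma\in\{(+,-),(-,+)\}$, with $\eta=\tfrac12(\sigma_1-\sigma_2)\in\{\pm1\}$. Split the $l$-sum into $l=0$ and $l\neq 0$. For $l=0$ the phase is trivial and one recovers exactly the situation of part (i), which vanishes by the same gamma-trace identity. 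For $l\neq 0$, the Diophantine hypothesis on $\tfrac{1}{2\pi}\Th$ permits invoking Theorem~\ref{analytic}(iii) to conclude that $\sum_{l\neq 0}e^{i\eta k\cdot\Th l}f_{\a,\mu}(s,l)$ extends holomorphically through $s=0$, hence has zero residue there. The second identity $\ncint\mathbb{A}^-\mathbb{A}^+=0$ follows from Lemma~\ref{symetrie}. The main obstacle is the uniform-in-$l$ control of the remainder needed to apply Corollary~\ref{res-somH} and Lemma~\ref{res-som} in dimension 2 (where the slower decay of the integrand makes the bookkeeping marginally more delicate than in the 4-dimensional case), together with the verification that our specific summand in the $l\neq 0$ case fits the polynomial/phase structure required by Theorem~\ref{analytic}(iii); once these technicalities are in hand, the substantive mechanism is simply the collapse $(2-n)|_{n=2}=0$.
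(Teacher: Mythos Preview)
Your approach is essentially the paper's: one application of the expansion $\tfrac{1}{|k+l|^2}=\tfrac{1}{|k|^2}-\tfrac{2k\cdot l+|l|^2}{|k|^2|k+l|^2}$, the remainder handled by Corollary~\ref{res-somH}, the residue computed via ${\sum_k}'\tfrac{k_{\mu_1}k_{\mu_2}}{|k|^{s+4}}=\tfrac{\delta_{\mu_1\mu_2}}{2}Z_2(s+2)$, and the vanishing via $\ga^\mu\ga^{\a_1}\ga_\mu=(2-n)\ga^{\a_1}=0$.

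One point of order in part (ii): Theorem~\ref{analytic}(iii) applies only to sums of the form ${\sum_k}'\tfrac{P(k)}{|k|^s}e^{2\pi i k\cdot a}$ with $P$ a polynomial in $k$ alone, so you cannot feed it $f_{\a,\mu}$ directly (the $|k+l|^2$ in the denominator is not of this shape). The paper first performs the single-expansion reduction $f_{\a,\mu}\sim g_{\a,\mu}$ with $g_{\a,\mu}(s,l)={\sum_k}'\tfrac{k_{\mu_1}(k+l)_{\mu_2}}{|k|^{s+4}}e^{i\eta k\cdot\Th l}\,\wt a_{\a,l}$, and only then splits into $l=0$ (handled by the gamma identity) and $l\neq 0$ (handled by Theorem~\ref{analytic}(iii), after decomposing $k_{\mu_1}(k+l)_{\mu_2}=k_{\mu_1}k_{\mu_2}+k_{\mu_1}l_{\mu_2}$ into genuine polynomials in $k$). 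Your closing caveat shows you are aware of this, but the reduction must precede the invocation of Theorem~\ref{analytic}, not follow it. Also, your displayed formula in (ii) has $k$ appearing as a free variable outside the sum that binds it; the phase belongs inside the $k$-sum defining $f_{\a,\mu}$.
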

\begin{proof}
$(i)$ Lemma \ref{formegenerale} entails that $\ncint \mathbb{A}^{++}=
\underset{s=0}{\Res} \sum_{l\in \Z^2} - f(s,l)$ where
$$
f(s,l):=  {{\sum}'_{k\in\Z^2}}
\tfrac{k_{\mu_1}(k+l)_{\mu_2}}{|k|^{s+2}|k+l|^2}\ \wt
a_{\a,l}\
\Tr(\ga^{\a_2}\ga^{\mu_2}\ga^{\a_1}\ga^{\mu_1})=:f_{\mu,\a}(s,l)
\Tr(\ga^{\a_2}\ga^{\mu_2}\ga^{\a_1}\ga^{\mu_1})
$$
and $\wt a_{\a,l}:=a_{\alpha_1,l}\, a_{\alpha_2,-l}$. This time,
since $n=2$, it is
enough to apply just once (\ref{trick-0}) to obtain an absolutely
convergent series.
Indeed, we get with (\ref{trick-0})
$$
f_{\mu,\a}(s,l)= {\sum_{k\in\Z^2}}'
\tfrac{k_{\mu_1}(k+l)_{\mu_2}}{|k|^{s+4}}\ \wt
a_{\a,l}- {\sum_{k\in\Z^2}}'
\tfrac{k_{\mu_1}(k+l)_{\mu_2}(2k.l+|l|^2)}{|k|^{s+4}|k+l|^2}\ \wt
a_{\a,l}.
$$
and the function $ r(s,l):= {\sum}'_{k\in\Z^2}
\tfrac{k_{\mu_1}(k+l)_{\mu_2}(2k.l+|l|^2)}{|k|^{s+4}|k+l|^2}\ \wt
a_{\a,l} $ is a
linear combination of functions of the type $H(s,l)$ satisfying the
hypothesis of
Corollary \ref{res-somH}. As a consequence, $r(s,l)$ satisfies (H1)
and
$$
f_{\mu,\a}(s,l)\sim {\sum_{k\in\Z^2}}'
\tfrac{k_{\mu_1}(k+l)_{\mu_2}}{|k|^{s+4}}\ \wt
a_{\a,l}\sim  {\sum_{k\in\Z^2}}'
\tfrac{k_{\mu_1}k_{\mu_2}}{|k|^{s+4}}\ \wt a_{\a,l}
$$ Note that the function $(s,l)\mapsto
h_{\mu,\a}(s,l):={\sum}'_{k\in\Z^2} \tfrac{k_{\mu_1}k_{\mu_2}}
{|k|^{s+4}}\ \wt a_{\a,l}$ satisfies (H2). Thus, Lemma \ref{res-som}
yields
$$
\underset{s=0}{\Res}\ f(s,l) =\sum_{l\in \Z^2} \underset{s=0}{\Res}\
h_{\mu,\a}(s,l)\Tr(\ga^{\a_2}\ga^{\mu_2}\ga^{\a_1}\ga^{\mu_1}).
$$
By Proposition \ref{calculres}, we get $\underset{s=0}{\Res}\
h_{\mu,\a}(s,l) =
\delta_{\mu_1\mu_2}\,\pi\, \wt a_{\a,l}$. Therefore,
$$
\ncint \mathbb{A}^{++}=-\pi \sum_{l\in \Z^2}\wt a_{\a,l}
\Tr(\ga^{\a_2}\ga^{\mu}\ga^{\a_1}\ga_{\mu})=0
$$
according to (\ref{Wick1}).

$(ii)$ By Lemma \ref{formegenerale}, we obtain that
$\ncint \mathbb{A}^{-+}= \underset{s=0}{\Res} \sum_{l\in \Z^2}
\lambda_\sigma
f_{\a,\mu}(s,l) \Tr(\ga^{\a_2}\ga^{\mu_2}\ga^{\a_1}\ga^{\mu_1})$
where $\la_\sigma=-(-i)^2=1$ and
$$
f_{\a,\mu}(s,l):=  {\sum_{k\in\Z^2}}'
\tfrac{k_{\mu_1}(k+l)_{\mu_2}}{|k|^{s+2}|k+l|^2}
\, e^{i \eta \,  k.\Th l} \,\wt a_{\a,l}\,
$$
and $\eta:=\half (\sigma_1-\sigma_2) =-1$. As in the proof of $(i)$,
since
the presence of the phase does not change the fact that $r(s,l)$
satisfies (H1), we
get
$$
f_{\a,\mu}(s,l) \sim {\sum_{k\in\Z^2}}'
\tfrac{k_{\mu_1}(k+l)_{\mu_2}}{|k|^{s+4}}\,e^{i \eta \, k.\Th l}\, \wt
a_{\a,l}:=g_{\a,\mu}(s,l)\, .
$$
Since $\tfrac {1}{2\pi}\Th$ is diophantine,  the functions $
s\mapsto \sum_{l\in\Z^2\backslash \{0\}} g_{\a,\mu}(s,l) $ are
holomorphic at $s=0$ by Theorem \ref{analytic} $3$. As a consequence,
$$
\ncint \mathbb{A}^{-+}=\underset{s=0}\Res\ g_{\a,\mu}(s,0)
\Tr(\ga^{\a_2}\ga^{\mu_2}\ga^{\a_1}\ga^{\mu_1})=\underset{s=0}\Res\
{\sum_{k\in\Z^2}}'
\tfrac{k_{\mu_1}k_{\mu_2}}{|k|^{s+4}}\,\wt
a_{\a,0}\,\Tr(\ga^{\a_2}\ga^{\mu_2}\ga^{\a_1}\ga^{\mu_1}).
$$
Recall from Proposition \ref{res-int} that
$\Res_{s=0}\,{\sum}'_{k\in\Z^2}\,\tfrac{k_{i}k_{j}} {\vert
k\vert^{s+4}}=\delta_{ij}\,\pi$. Thus, again with (\ref{Wick1}),
\begin{align*}
\ncint \mathbb{A}^{-+}=\wt a_{\a,0}\,
\pi\,\Tr(\ga^{\a_2}\ga^{\mu}\ga^{\a_1}\ga_{\mu})=0.
\tag*{\qed}
\end{align*}
\hideqed
\end{proof}

\begin{lemma}
    \label{termeconstantn=2}
Suppose $n=2$ and $\tfrac{1}{2\pi}\Th$ diophantine. For any self-adjoint one-form $A$,
\begin{align*}
\zeta_{D_A}(0)-\zeta_{D}(0)= 0.
\end{align*}
\end{lemma}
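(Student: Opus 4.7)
The plan is to mimic the structure of the proof of Lemma~\ref{double}, but in dimension $n=2$ the computation is drastically simpler because there are only two levels ($q=1$ and $q=2$) to deal with in the expansion~\eqref{termconstanttilde}.

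First, apply~\eqref{termconstanttilde} combined with Lemma~\ref{ncadmoins1} to obtain, for $n=2$,
\begin{equation*}
\zeta_{D_A}(0)-\zeta_D(0) \;=\; -\sum_{\sigma\in\{+,-\}}\ncint\mathbb{A}^{\sigma}
\;+\;\tfrac{1}{2}\sum_{\sigma\in\{+,-\}^2}\ncint\mathbb{A}^{\sigma}.
\end{equation*}
So everything reduces to showing the eight one-forms $\ncint \mathbb A^{+}$, $\ncint \mathbb A^{-}$, and the four $q=2$ noncommutative integrals, all vanish. The $q=1$ (tadpole) contributions are handled by Lemma~\ref{tadpole}\,(i): since $\DD D^{-2}=D^{-1}\mod OP^{-\infty}$, one has $\ncint \mathbb A^{+}=\ncint A\,D^{-1}=0$ (the case $p=q=1$, which is below the dimension $n=2$), and symmetrically, using Lemma~\ref{adjoint} or directly the analogous statement for $\epsilon JAJ^{-1}$ in Lemma~\ref{tadpole}\,(i), $\ncint\mathbb A^{-}=0$.

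For the $q=2$ level, all four integrals are already proved to vanish in Lemma~\ref{term-n=2}: parts (i) and (ii) give
\begin{equation*}
\ncint(\mathbb A^{+})^2=\ncint(\mathbb A^{-})^2=0,\qquad
\ncint \mathbb A^{+}\mathbb A^{-}=\ncint \mathbb A^{-}\mathbb A^{+}=0,
\end{equation*}
the first two being purely algebraic (a single application of~\eqref{trick-0} brings us to a Hurwitz-type residue that cancels via~\eqref{Wick1} because $\Tr(\ga^{\a_2}\ga^{\mu}\ga^{\a_1}\ga_{\mu})=0$ in dimension $2$), and the last two requiring the Diophantine hypothesis on $\tfrac{1}{2\pi}\Th$ to collapse the sum over $l\in\Z^2$ to the $l=0$ term, which also vanishes for the same gamma-matrix reason. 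Assembling these vanishings yields the claimed identity.

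The one place that might look delicate is the reduction $\DD D^{-2}\sim D^{-1}$: it is justified because $D^{-1}-\DD D^{-2}=P_0 D^{-2}\in OP^{-\infty}$ by Proposition~\ref{noyaux}, so $\ncint\wt A D^{-1}=\ncint\wt A\DD D^{-2}=\ncint\mathbb{A}^{+}+\ncint\mathbb{A}^{-}$. Once this is noted, the argument is essentially bookkeeping: there are no crossed-term computations to perform (contrary to the $n=4$ case treated in Lemma~\ref{double}), because the four $q=2$ integrals are \emph{individually} zero in dimension~$2$, not just after summation. The main (mild) obstacle is therefore only to check that the two ingredients—the tadpole vanishing of Lemma~\ref{tadpole} and the $q=2$ vanishing of Lemma~\ref{term-n=2}—can be invoked simultaneously under the Diophantine assumption, which is automatic since Lemma~\ref{tadpole}\,(i) is unconditional on $\Th$, while Lemma~\ref{term-n=2}\,(ii) already incorporates the Diophantine hypothesis.
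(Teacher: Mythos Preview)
Your proof is correct and follows essentially the same approach as the paper: expand via \eqref{termconstanttilde} and Lemma~\ref{ncadmoins1}, then kill the $q=1$ terms by the tadpole vanishing and the $q=2$ terms by Lemma~\ref{term-n=2}. The paper compresses all this into one sentence (``As in Lemma~\ref{double}, we use \eqref{termconstanttilde} and Lemma~\ref{ncadmoins1}, so the result follows from Lemma~\ref{term-n=2}''), leaving the $q=1$ vanishing implicit in the reference to Lemma~\ref{double}, whereas you spell it out. One minor citation issue: the fact that $P_0\in OP^{-\infty}$ (hence $D^{-1}-\DD D^{-2}\in OP^{-\infty}$) is not Proposition~\ref{noyaux} but is stated directly after the definition of $OP^{\alpha}$ in Section~1.2.
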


\begin{proof}
As in Lemma \ref{double}, we use (\ref{termconstanttilde}) and Lemma
\ref{ncadmoins1}
so the result
follows from Lemma \ref{term-n=2}.
\end{proof}

\subsubsection{Odd dimensional case}

\begin{lemma}
\label{impair}
Suppose $n$ odd and $\tfrac{1}{2\pi}\Th$
diophantine.
Then for any self-adjoint 1-form $A$ and $\sigma\in \{-,+\}^q$ with
$2\leq q\leq n$,
$$
\ncint \mathbb{A}^\sigma = 0\, .
$$
\end{lemma}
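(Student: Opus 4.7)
The strategy parallels Lemma \ref{Termaterm} $(iv)$ and Lemma \ref{term-n=2}, exploiting two facts: Theorem \ref{analytic} $(iii)$ kills all genuinely phase-dependent contributions thanks to the Diophantine assumption, while the odd dimension $n$ forces the surviving sphere integrals to vanish by parity.

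By Lemma \ref{formegenerale}, $\ncint \mathbb{A}^\sigma = \underset{s=0}{\Res}\ f(s)$ where $f(s)$ is a sum over $l \in (\Z^n)^{q-1}$ and $k \in \Z^n \setminus \set{0}$ of $\lambda_\sigma\, e^{\tfrac i2 \phi_\sigma(k,l)}\, g_\mu(s,k,l)\, \wt a_{\a,l}\, \Tr(\ga^{\a_q}\ga^{\mu_q}\cdots\ga^{\a_1}\ga^{\mu_1})$. The first step is to expand each denominator $|k+\wh l_j|^{-2}$ inside $g_\mu$ by iterating the identity (\ref{trick-0}) to a sufficiently large order. Modulo remainders controlled by Corollary \ref{res-somH} (which satisfy (H1) and thus contribute nothing to the residue at $s=0$), this writes $g_\mu(s,k,l)$ as a finite linear combination of terms
$$\frac{k_{\nu_1}\cdots k_{\nu_d}\, R(l)}{|k|^{s+2q+2b}}, \qquad b, d \in \N_0, \ R \text{ polynomial in } l.$$
Crucially, the denominator exponent $s+2q+2b$ is always $s$ plus an \emph{even} integer, since every factor entering the expansion contributes a power of $|k|^{-2}$.

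Next, one splits the $k$-phase: setting $\eps_j := \tfrac 12(\sigma_j - \sigma_q) \in \set{-1,0,1}$ as in Lemma \ref{Termaterm} $(iv)$, one has $\tfrac 12 \phi_\sigma(k,l) = k.\Th v_\sigma(l) + \psi_\sigma(l)$ with $v_\sigma(l) := \sum_{j=1}^{q-1}\eps_j l_j \in \Z^n$. Since $\tfrac{1}{2\pi}\Th$ is diophantine, Theorem \ref{analytic} $(iii)$ shows that for every $l$ with $v_\sigma(l) \neq 0$ the inner $k$-series
$${\sum}'_k \frac{e^{i k.\Th v_\sigma(l)} k_{\nu_1}\cdots k_{\nu_d}}{|k|^{s+2q+2b}}$$
is holomorphic at $s=0$; Lemma \ref{res-som} then allows one to discard the contribution of all such $l$ to the residue. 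The remaining sum is restricted to $l$ satisfying $v_\sigma(l) = 0$, for which the inner series reduces to the standard $\sum'_k k_{\nu_1}\cdots k_{\nu_d}/|k|^{s+2q+2b}$, whose residue at $s=0$, by Proposition \ref{res-int}, is proportional to $\int_{S^{n-1}} u_{\nu_1}\cdots u_{\nu_d}\, dS(u)$ and is nonzero only when the pole condition $d = 2(q+b) - n$ holds.

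The last ingredient is purely arithmetic: since $n$ is odd, any $d$ satisfying $d = 2(q+b)-n$ is odd, so $u_{\nu_1}\cdots u_{\nu_d}$ is an odd function on $S^{n-1}$ and its integral vanishes by the antipodal symmetry $u \mapsto -u$, exactly as in the proof of Lemma \ref{ncint-odd-pdo}. Hence every surviving contribution is zero and $\ncint \mathbb{A}^\sigma = 0$ for every $\sigma \in \set{-,+}^q$. The main technical delicacy, as in Lemma \ref{Termaterm}, lies in rigorously justifying the interchange of the $l$-summation with the residue at each stage; this is handled exactly as there, by combining the absolute convergence estimates of Lemma \ref{abs-som} with Lemma \ref{res-som} and Corollary \ref{res-somH}.
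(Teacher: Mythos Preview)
Your proof is correct, and the underlying mechanism---Diophantine control via Theorem \ref{analytic} together with the parity of sphere integrals in odd dimension---is exactly what drives the paper's argument as well. The difference is one of packaging. The paper's proof is a single line: it observes that $\mathbb{A}^\sigma \in \Psi_1(\A)$ (each factor $\mathbb{A}^\pm$ lies in this algebra) and then invokes Lemma \ref{ncint-odd-pdo} with $k=n$, which immediately gives $\ncint \mathbb{A}^\sigma \vert D\vert^{0} = 0$. That lemma already encodes, in the pseudodifferential framework of $\Psi_1(\A)$, the reduction to polynomial-in-$k$ numerators and the odd-degree vanishing on the sphere; so the expansion of the $|k+\wh l_j|^{-2}$ factors via (\ref{trick-0}) that you carry out by hand is absorbed into the structural fact that elements of $\Psi_1(\A)$ are, modulo $OP^{-N}$, of the form $B D^{-2p}$ with $B$ polynomial in $\A$, $J\A J^{-1}$, $\DD$. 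Your route is more explicit and self-contained (it does not require recognising $\mathbb{A}^\sigma$ as a pseudodifferential operator of the right class), at the cost of redoing the denominator expansion and the (H1)/(H2) bookkeeping that Lemma \ref{ncint-odd-pdo} handles once and for all.
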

\begin{proof} Since $\mathbb{A}^\sigma \in \Psi_1(\A)$, Lemma
\ref{ncint-odd-pdo} with $k=n$ gives the result.
\end{proof}

\begin{corollary}
\label{zetaimpair}
With the same hypothesis of Lemma \ref{impair}, for any self-adjoint
one-form $A$,
$\zeta_{D_A}(0)-\zeta_{D}(0)=0.$
\end{corollary}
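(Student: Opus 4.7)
The plan is to combine formula \eqref{termconstanttilde} with Lemma \ref{ncadmoins1} to rewrite
$$
\zeta_{D_A}(0)-\zeta_D(0) \;=\; \sum_{q=1}^{n}\tfrac{(-1)^q}{q}\,\ncint (\wt A D^{-1})^q
\;=\; \sum_{q=1}^{n}\tfrac{(-1)^q}{q}\sum_{\sigma\in\{+,-\}^q}\ncint \mathbb{A}^\sigma,
$$
so that the problem reduces to showing that every noncommutative integral $\ncint \mathbb{A}^\sigma$ on the right-hand side vanishes. This mirrors the strategy used in Lemma \ref{double} and Lemma \ref{termeconstantn=2} for the even case, with the important difference that, for odd $n$, one expects a much stronger vanishing coming purely from a parity argument rather than from delicate cancellations between $\sigma$'s.

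For every $q$ in the range $2\leq q\leq n$, Lemma \ref{impair} applies directly and gives $\ncint \mathbb{A}^\sigma=0$ for all $\sigma\in\{+,-\}^q$. This step is immediate and uses the key fact, established in Lemma \ref{ncint-odd-pdo}, that the noncommutative integral of a $\Psi_1(\A)$-pseudodifferential operator against $|D|^{-(n-k)}$ vanishes whenever $k$ is odd. Note that $\mathbb{A}^\sigma\in\Psi_1(\A)$ since it is built from $A$, $\eps JAJ^{-1}$, $\DD$ and $D^{-2}$, without any appearance of $|\DD|$ raised to an odd power; this is precisely why Lemma \ref{impair} is applicable.

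The only remaining contribution is the $q=1$ term, namely
$$
-\ncint \wt A D^{-1} \;=\; -\ncint A D^{-1} \;-\; \eps\,\ncint JAJ^{-1} D^{-1}.
$$
Both of these are tadpoles. Since $n$ is odd and at least $3$, Lemma \ref{tadpole}(i) (applied with $p=1$ and $q=1$, which indeed satisfies $1\leq q<n$) yields $\ncint A D^{-1}=0$ and $\ncint \eps JAJ^{-1}D^{-1}=0$. Hence $\ncint \wt A D^{-1}=0$ as well, so the $q=1$ contribution also vanishes.

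Summing the vanishing contributions for all $1\leq q\leq n$ gives $\zeta_{D_A}(0)-\zeta_D(0)=0$, which is the statement. The only step that required a genuine input was the verification that the relevant operators belong to $\Psi_1(\A)$ (so that the parity argument of Lemma \ref{ncint-odd-pdo} applies) and the invocation of the vanishing tadpole property for the $q=1$ term; everything else is bookkeeping.
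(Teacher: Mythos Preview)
Your proof is correct and follows the same route as the paper: apply \eqref{termconstanttilde} together with Lemma~\ref{ncadmoins1} to reduce to the integrals $\ncint \mathbb{A}^\sigma$, then invoke Lemma~\ref{impair} to kill them. The paper's proof is a one-line reference to Lemma~\ref{impair}, while you are more careful in treating the $q=1$ term separately via the vanishing-tadpole result Lemma~\ref{tadpole}(i). This extra care is justified, since Lemma~\ref{impair} is only stated for $2\leq q\leq n$; note, however, that the \emph{proof} of Lemma~\ref{impair} (namely $\mathbb{A}^\sigma\in\Psi_1(\A)$ combined with Lemma~\ref{ncint-odd-pdo} at $k=n$) works verbatim for $q=1$ as well, so you could alternatively dispatch the $q=1$ case by the same parity argument without appealing to tadpoles.
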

\begin{proof} As in Lemma \ref{double}, we use
(\ref{termconstanttilde}) and Lemma \ref{ncadmoins1}
so the result
follows from Lemma \ref{impair}.
\end{proof}

\subsection{Proof of the main result}
\begin{proof}[Proof of Theorem \ref{main}.]
$(i)$ By (\ref{formuleaction}) and Proposition \ref{invariance}, we
get
$$
\SS(\DD_{A},\Phi,\Lambda) \, = \, 4\pi \Phi_{2}\,
    \Lambda^{2}  + \Phi(0) \,
    \zeta_{D_{A}}(0) + \mathcal{O}(\Lambda^{-2}),
$$
where $\Phi_{2}= \half\int_{0}^{\infty} \Phi(t) \, dt$. By Lemma
\ref{termeconstantn=2},
$\zeta_{D_A}(0) - \zeta_{D}(0) = 0$ and from Proposition
\ref{zeta(0)}, $\zeta_{D}(0)=0$,
so we get the result.

\noindent $(ii)$ Similarly,
$\SS(\DD_{A},\Phi,\Lambda) \, =  8 \pi^2\, \Phi_{4}\,
    \Lambda^{4} + \Phi(0) \,
    \zeta_{D_{A}}(0) + \mathcal{O}(\Lambda^{-2})$
with $\Phi_{4}= \half\int_{0}^{\infty} \Phi(t) \, t \, dt$.
Lemma \ref{double} implies that
$\zeta_{D_A}(0) - \zeta_D(0)=-c\,\tau(F_{\mu\nu}F^{\mu\nu})$
and by Proposition \ref{zeta(0)},
$\zeta_{D_A}(0)=-c\,\tau(F_{\mu\nu}F^{\mu\nu})$ leading to the
result.

\noindent $(iii)$ is a direct consequence of (\ref{formuleaction}),
Propositions \ref{zeta(0)}, \ref{invariance}, and Corollary
\ref{zetaimpair}.
\end{proof}

\section{Holomorphic continuation and residues of series of zeta functions}

In the following, the prime in ${\sum}'$ means that we omit terms
with division by zero in the summand.
$B^{n}$ (resp. $S^{n-1}$) is the closed ball (resp.
the sphere) of $\R^n$ with center $0$ and radius 1 and the
Lebesgue measure on $S^{n-1}$ will be noted $dS$.

For any $x=(x_1,\dots,x_n) \in \R^n$ we denote by
$|x|=\sqrt{x_1^2+\dots+x_n^2}$ the
euclidean norm and $|x|_1 :=|x_1|+\dots +|x_n|$.

$\N =\{1, 2,\dots \}$ is the set of positive integers and $\N_0 =\N
\cup \{0\}$ the set of non negative integers.

By  $f(x,y) \ll_{y} g(x)$
uniformly in $x$, we mean that $\vert f(x,y)\vert \leq a(y)
\, \vert g(x) \vert$ for all $x$ and $y$ for some $a(y)>0$.

\subsection{Residues of series and integral}
In order to be able to compute later the residues of certain series,
we prove here the following

\begin{theorem}\label{res-int} Let $P(X)=\sum_{j=0}^{d} P_j(X)
 \in \C[X_1,\cdots,X_n]$ be
a polynomial function where $P_j$
is the homogeneous part of $P$ of degree $j$. The function
$$ \zeta^P(s):={\sum}'_{k\in\Z^n} \tfrac{P(k)}{|k|^s}, \,\,\, s \in \C
$$ has a meromorphic continuation to the whole complex plane
$\C$. \par Moreover $\zeta^P(s)$ is not entire if and only if
$\mathcal{P}_P:= \{j \  : \
\int_{u\in S^{n-1}} P_j(u)\, dS(u)\neq 0 \}\neq \varnothing$.
In that case, $\zeta^P$ has only simple poles at the points $j+n$,
$j\in \mathcal{P}_{P}$, with
$$
\underset{s=j+n}{\Res} \, \zeta^P(s) = \int_{u\in S^{n-1}} P_j(u)\,
dS(u).
$$
\end{theorem}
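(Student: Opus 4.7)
The plan is to use a Mellin--theta representation followed by Poisson summation. First, by linearity of $P \mapsto \zeta^P$ and the decomposition $P = \sum_{j=0}^d P_j$, it suffices to treat each homogeneous part $P_j$ of degree $j$ separately. For $j$ odd, both $\zeta^{P_j}$ and $\int_{S^{n-1}} P_j(u)\,dS(u)$ vanish by the involution $k \mapsto -k$ on $\Z^n\setminus\{0\}$ and on $S^{n-1}$, so the claim is trivial; hence we may assume $j$ is even (including $j=0$). For $\Re s > n+j$ the series defining $\zeta^{P_j}$ converges absolutely. Using $\Gamma(s/2)\,|k|^{-s} = \int_0^\infty t^{s/2-1}e^{-t|k|^2}\,dt$ and interchanging sum and integral gives
\begin{equation*}
\Gamma(s/2)\,\zeta^{P_j}(s) \,=\, \int_0^\infty t^{s/2-1}\,\theta_j(t)\,dt, \qquad \theta_j(t) := {\sum_{k \in \Z^n}}' \, P_j(k)\,e^{-t|k|^2}.
\end{equation*}
Splitting at $t=1$, the tail $\int_1^\infty t^{s/2-1}\theta_j(t)\,dt$ is entire in $s$ since $\theta_j$ decays exponentially as $t\to\infty$.

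Next, I extract the asymptotic of $\theta_j(t)$ as $t\to 0^+$ via the Poisson summation formula applied to $f_t(x) := P_j(x)e^{-t|x|^2}$. Since the Fourier transform of $e^{-t|x|^2}$ is $(\pi/t)^{n/2} e^{-\pi^2|\xi|^2/t}$, and multiplication by $P_j(x)$ corresponds to the differential operator $P_j\bigl(\tfrac{1}{2\pi i}\partial_\xi\bigr)$ on the Fourier side, one finds $\hat f_t(\xi) = (\pi/t)^{n/2}\,Q_j(\xi,t)\,e^{-\pi^2|\xi|^2/t}$ for some polynomial $Q_j$ in $\xi$ with coefficients rational in $t$. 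At $\xi = 0$ a polar-coordinate computation gives
\begin{equation*}
\hat f_t(0) \,=\, \int_{\R^n} P_j(x)e^{-t|x|^2}\,dx \,=\, \tfrac12\,\Gamma\!\bigl(\tfrac{n+j}{2}\bigr)\,t^{-(n+j)/2}\,\int_{S^{n-1}} P_j(u)\,dS(u),
\end{equation*}
whereas for each $\xi\ne 0$ the Gaussian $e^{-\pi^2|\xi|^2/t}$ dominates every power of $1/t$; summing over $\xi\in\Z^n\setminus\{0\}$ preserves the rapid decay. Separating off the $k=0$ term from the full Poisson sum then yields
\begin{equation*}
\theta_j(t) \,=\, \tfrac12\,\Gamma\!\bigl(\tfrac{n+j}{2}\bigr)\,\Bigl(\int_{S^{n-1}} P_j(u)\,dS(u)\Bigr)\,t^{-(n+j)/2} \,-\, P_j(0) \,+\, O_N(t^N) \qquad (\forall N),\ t\to 0^+.
\end{equation*}

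Substituting into $\int_0^1 t^{s/2-1}\theta_j(t)\,dt$ and integrating term by term, the $t^{-(n+j)/2}$ piece contributes $\frac{\Gamma((n+j)/2)\,\int_{S^{n-1}}P_j\,dS}{s-n-j}$, meromorphic on $\C$ with a simple pole at $s=n+j$ precisely when the spherical integral is nonzero; the $-P_j(0)$ term contributes $-2P_j(0)/s$ (relevant only for $j=0$), whose pole at $s=0$ is killed by the simple zero of $1/\Gamma(s/2)$ there; and the $O_N(t^N)$ remainder contributes an entire function by the arbitrariness of $N$. Dividing through by $\Gamma(s/2)$ (whose other poles at $s=-2,-4,\dots$ are likewise cancelled by zeros of the explicit pieces or lie outside the domain of the claimed poles) and using $\Gamma((n+j)/2) = \Gamma(s/2)|_{s=n+j}$ yields $\underset{s=n+j}{\Res}\,\zeta^{P_j}(s) = \int_{S^{n-1}} P_j(u)\,dS(u)$, with simplicity and pole locations as claimed. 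Summing over $j$ completes the theorem. The main technical obstacle is the Poisson step: one must verify carefully that the polynomial factors $Q_j(\xi,t)$ (which carry up to $j$ negative powers of $t$) are genuinely absorbed by the Gaussian $e^{-\pi^2|\xi|^2/t}$, uniformly for $\xi\in\Z^n\setminus\{0\}$, so that the non-zero frequency Poisson contribution is truly $O_N(t^N)$ on $(0,1]$ for every $N$.
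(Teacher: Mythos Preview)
Your proof is correct and takes a genuinely different route from the paper's. The paper proceeds by an iterated Euler--Maclaurin argument (one coordinate at a time, after inserting a smooth cutoff $\psi$ near the origin) to show that the difference
\[
\Delta_P(s)={\sum_{k\in\Z^n}}'\tfrac{P(k)}{|k|^s}-\int_{\R^n\setminus B^n}\tfrac{P(x)}{|x|^s}\,dx
\]
extends to an entire function of $s$; the radial integral $\int_{\R^n\setminus B^n}P_j(x)|x|^{-s}\,dx$ is then computed directly in polar coordinates as $\tfrac{1}{j+n-s}\int_{S^{n-1}}P_j\,dS$, and the pole structure drops out.

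Your Mellin--theta/Poisson method is the classical approach to Epstein-type zeta functions and is arguably more conceptual: it packages the combinatorics into a single functional identity for $\theta_j$, and in fact would give a functional equation for $\zeta^{P_j}$ with essentially no extra work. The paper's approach is more elementary (no Fourier transform, no Gamma-function gymnastics) and makes the pole structure visible as a simple radial integration. Both reach the same residue formula because in each case the residue ultimately comes from $\int_{\R^n}P_j(x)$ against a radial weight, which factors through $\int_{S^{n-1}}P_j$.

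Two small expository points: your remark about ``poles of $\Gamma(s/2)$ at $s=-2,-4,\dots$ being cancelled'' is slightly muddled---since $1/\Gamma(s/2)$ is entire, dividing by $\Gamma(s/2)$ can only remove poles, never create them, so there is nothing to cancel there; and the $-P_j(0)$ term is nonzero only for $j=0$, which you note but could state more crisply. Neither affects the validity of the argument. The technical obstacle you flag (uniform $O_N(t^N)$ control of the nonzero-frequency Poisson sum) is genuine but standard: substituting $u=\pi\xi/\sqrt{t}$ turns $\partial_\xi^\alpha e^{-\pi^2|\xi|^2/t}$ into Hermite-type polynomials times Gaussians, and for $|\xi|\geq 1$, $t\leq 1$ the factor $e^{-\pi^2|\xi|^2/(2t)}$ dominates every power of $t^{-1}$ and sums to $O(e^{-c/t})$.
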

\medskip

The proof of this theorem is based on the following lemmas.

\begin{lemma}\label{majPQs}
For any polynomial $P\in \C[X_1,\dots,X_n]$ of total
degree $\delta (P):=\sum_{i=1}^n deg_{X_i}P$ and any $\alpha \in
\N_0^n$, we have
$$\partial^{\alpha} \left(P(x) |x|^{-s}\right)\ll_{P, \alpha, n}
  (1+|s|)^{|\alpha|_1} |x|^{-\sigma -|\alpha|_1 +\delta (P)}$$
uniformly  in  $x \in \R^n$ verifying $|x|\geq 1$, where
$\sigma=\Re(s)$.
\end{lemma}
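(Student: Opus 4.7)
My plan is to reduce this to the standard two-sided estimate by combining Leibniz's rule with a separate induction on the derivatives of $|x|^{-s}$. First, I would write
$$
\partial^{\alpha}\bigl(P(x)|x|^{-s}\bigr)
=\sum_{\beta\leq\alpha}\binom{\alpha}{\beta}\,\partial^{\beta}P(x)\cdot\partial^{\alpha-\beta}\bigl(|x|^{-s}\bigr),
$$
and handle the two factors separately. For the polynomial factor, since $\partial^{\beta}P$ is a polynomial of total degree at most $\delta(P)-|\beta|_1$ (zero if any $\beta_i$ exceeds $\deg_{x_i}P$), the bound $|\partial^{\beta}P(x)|\ll_{P,\beta}|x|^{\delta(P)-|\beta|_1}$ holds uniformly for $|x|\geq 1$.

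The main content is the control of $\partial^{\gamma}(|x|^{-s})$. I would prove by induction on $|\gamma|_1$ the claim
$$
\partial^{\gamma}\bigl(|x|^{-s}\bigr)=\sum_{k\in F_{\gamma}} c_{k}(s)\, x^{\mu_{k}}\, |x|^{-s-2m_{k}},
$$
where $F_{\gamma}$ is a finite index set, $\mu_{k}\in\N_0^n$ and $m_{k}\in\N_0$ satisfy $2m_{k}-|\mu_{k}|_1=|\gamma|_1$, and $c_{k}(s)$ is a polynomial in $s$ of degree at most $|\gamma|_1$ whose coefficients depend only on $n$ and $\gamma$. The base case $\gamma=0$ is trivial, and the inductive step follows from
$$
\partial_{i}\bigl(x^{\mu}|x|^{-s-2m}\bigr)=\mu_{i}\,x^{\mu-e_{i}}|x|^{-s-2m}-(s+2m)\,x^{\mu+e_{i}}|x|^{-s-2m-2}.
$$
Each differentiation increases the polynomial degree of the prefactor by at most one (in $s$) and shifts the $|x|$-exponent and monomial degree so that the relation $2m_k-|\mu_k|_1=|\gamma|_1$ is preserved. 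Taking absolute values and using $|x^{\mu_k}|\leq |x|^{|\mu_k|_1}$ together with $|c_k(s)|\ll_{n,\gamma}(1+|s|)^{|\gamma|_1}$ yields
$$
\bigl|\partial^{\gamma}(|x|^{-s})\bigr|\ll_{n,\gamma}(1+|s|)^{|\gamma|_1}\,|x|^{-\sigma-|\gamma|_1}\quad\text{for }|x|\geq 1.
$$

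Finally, I would plug these two estimates into the Leibniz expansion: the term indexed by $\beta$ is bounded by a constant (depending on $P,\alpha,n$) times $(1+|s|)^{|\alpha-\beta|_1}|x|^{\delta(P)-|\beta|_1-\sigma-|\alpha-\beta|_1}=(1+|s|)^{|\alpha-\beta|_1}|x|^{\delta(P)-|\alpha|_1-\sigma}$, and since $(1+|s|)^{|\alpha-\beta|_1}\leq(1+|s|)^{|\alpha|_1}$ and the sum over $\beta\leq\alpha$ is finite, the claimed bound follows. The only mildly delicate point is the bookkeeping in the induction — in particular tracking that the polynomial in $s$ stays of degree exactly $|\gamma|_1$ and that the invariant $2m_k-|\mu_k|_1=|\gamma|_1$ is maintained — but everything else is a direct calculation.
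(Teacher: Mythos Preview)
Your proof is correct and follows essentially the same strategy as the paper: separate $P(x)$ from $|x|^{-s}$ via Leibniz, establish that $\partial^{\gamma}(|x|^{-s})$ is a finite sum of terms $c(s)\,x^{\mu}|x|^{-s-2m}$ with $2m-|\mu|_1=|\gamma|_1$ and $c(s)$ polynomial of degree $\le|\gamma|_1$, and conclude. The only cosmetic difference is that the paper first reduces to a monomial $P=X^{\gamma}$ and writes down an explicit closed formula $\partial^{\alpha}(|x|^{-s})=\alpha!\sum_{\beta+2\mu=\alpha}\binom{-s/2}{|\beta|_1+|\mu|_1}\frac{(|\beta|_1+|\mu|_1)!}{\beta!\,\mu!}\,x^{\beta}|x|^{-s-2(|\beta|_1+|\mu|_1)}$, whereas you keep $P$ general and prove only the structural form by induction; either version yields the same bound.
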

\begin{proof}
By linearity, we may assume without loss of generality
that $P(X)=X^{\gamma}$ is a monomial. It is easy to
prove (for example by induction on $|\alpha|_1$) that for all $\alpha
\in \N_0^n$ and
$x \in \R^n \setminus \{0\}$:
$$
\partial^{\alpha} \left(|x|^{-s}\right)
=\alpha! \sum_{\genfrac{}{}{0pt}{2}{\beta, \mu \in \N_0^n}{\beta + 2
\mu =
\alpha }}
\genfrac(){0pt}{1}{-s/2}{|\beta|_1 +|\mu|_1}
\tfrac{(|\beta|_1+|\mu|_1)!}{\beta! ~ \mu!}
\tfrac{x^{\beta}}{|x|^{\sigma+2(|\beta|_1+|\mu|_1)}}.
$$
It follows that for all $\alpha \in \N_0^n$, we have uniformly in  $x
\in \R^n$
verifying $|x|\geq 1$:
\begin{equation}\label{majQs}
\partial^{\alpha} \left(|x|^{-s}\right) \ll_{\alpha,  n}
  (1+|s|)^{|\alpha|_1} |x|^{-\sigma -|\alpha|_1}\,.
\end{equation}

By Leibniz formula and (\ref{majQs}), we have uniformly in $x\in \R^n$
verifying $|x|\geq 1$:
\begin{align*}
\partial^{\alpha} \left(x^\gamma |x|^{-s}\right) & = \, \sum_{\beta
\leq \alpha}
\genfrac(){0pt}{1}{\alpha}{\beta} \,  \partial^{\beta} (x^\gamma)~
\partial^{\alpha
  -\beta} \left(|x|^{-s}\right) \\
& \ll_{\gamma, \alpha, n}  \sum_{\beta \leq \alpha; \beta \leq
\gamma} x^{\gamma
-\beta}~ (1+|s|)^{|\alpha|_1-|\beta|_1}~
|x|^{-\sigma-|\alpha|_1+|\beta|_1} \\
& \ll_{\gamma, \alpha, n}  (1+|s|)^{|\alpha|_1}~
|x|^{-\sigma-|\alpha|_1+|\gamma|_1}.
\tag*{\qed}
\end{align*}
\hideqed
\end{proof}

\begin{lemma}\label{deltaf} Let $P\in \C[X_1,\dots,X_n]$ be a
  polynomial of degree $d$.
Then, the difference
$$
\Delta_P(s):={\sum}'_{k\in\Z^n}
\tfrac{P(k)}{|k|^s}-\int_{\R^n\setminus B^{n}}
\tfrac{P(x)}{|x|^s} \, dx
$$
which is defined for $\Re(s)>d+n$, extends holomorphically on the
whole complex plane
$\C$.
\end{lemma}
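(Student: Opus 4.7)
The plan is to partition $\R^n$ into translates of the unit cube centered at integer points and apply, in each cube, a Taylor expansion of the integrand around the center, exploiting the symmetry of the cube to kill odd-moment corrections. Fix $C>0$ large enough that $Q_k := k + [-\tfrac12,\tfrac12]^n \subset \R^n \setminus B^n$ whenever $|k|\geq C$, and set $f(x) := P(x)\,|x|^{-s}$. The finite sum $\sum_{0<|k|<C} P(k)|k|^{-s}$ and the integral of $f$ over the bounded region $(\R^n\setminus B^n)\setminus \bigcup_{|k|\geq C} Q_k$ (which stays away from the origin) are both entire in $s$, so modulo entire functions
\begin{equation*}
\Delta_P(s) \,\equiv\, G(s) \,:=\, \sum_{|k|\geq C}\Big[\, f(k) - \int_{Q_k} f(x)\,dx\,\Big].
\end{equation*}

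For each order $N\in \N$, Taylor's formula with integral remainder around $x=k$, combined with $\int_{Q_k}(x-k)^\alpha\,dx = 0$ whenever $\alpha$ has an odd component, yields, with $m_\alpha := \int_{[-1/2,1/2]^n} u^\alpha\,du$,
\begin{equation*}
G(s) = -\!\!\sum_{\substack{\alpha\in 2\N_0^n \\ 2\leq |\alpha|_1 < N}}\!\! \tfrac{m_\alpha}{\alpha!}\,M_\alpha(s) \;-\; \sum_{|k|\geq C}\int_{Q_k} R_N(x,k)\,dx, \qquad M_\alpha(s) := \!\sum_{|k|\geq C}\!\partial^\alpha f(k).
\end{equation*}
By Lemma \ref{majPQs}, $|R_N(x,k)| \ll_{n,N} (1+|s|)^N\, |k|^{d-\sigma-N}$ uniformly on $Q_k$, so the remainder sum is holomorphic on $\{\Re(s) > d+n-N\}$; similarly each $M_\alpha(s)$ converges absolutely and is holomorphic on $\{\Re(s) > d+n-|\alpha|_1\}$. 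The common domain of validity of this identity is $\{\Re(s)>d+n-2\}$ (attained at $|\alpha|_1=2$), already extending $\Delta_P$ by two units.

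To push the extension below $d+n-2$, I would apply the cube–Taylor procedure \emph{recursively} to each $M_\alpha$. Using Leibniz together with the explicit formula
\begin{equation*}
\partial^\gamma(|x|^{-s}) = \gamma!\!\sum_{\beta+2\mu=\gamma}\! \binom{-s/2}{|\beta|_1+|\mu|_1}\,\tfrac{(|\beta|_1+|\mu|_1)!}{\beta!\,\mu!}\,\tfrac{x^\beta}{|x|^{s+2(|\beta|_1+|\mu|_1)}}
\end{equation*}
from the proof of Lemma \ref{majPQs}, each $M_\alpha(s)$ is a finite linear combination, with coefficients polynomial in $s$, of: (a) unshifted pieces $\sum'_k (\partial^\alpha P)(k)|k|^{-s}$ whose polynomial has \emph{strictly smaller} degree than $d$, and (b) shifted pieces $\sum'_k Q(k)|k|^{-(s+2e)}$ with integer $e\geq 1$ and $\deg Q \leq d+|\alpha|_1-2e$; both are again $\zeta$-type series and hence admit their own cube–Taylor decompositions. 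The main obstacle is the bookkeeping of this nested recursion: type (a) strictly decreases the polynomial degree and terminates at the classical Epstein zeta $Z_n(s) = \sum'_k |k|^{-s}$, whose meromorphic continuation to $\C$ with a simple pole at $s=n$ is classical, while type (b) feeds back into further iterations where the positive shift $2e$ places the argument into a half-plane where the inductive hypothesis applies. Combining all of this, $G(s)$ extends holomorphically to every $\{\Re(s)>d+n-N\}$, and letting $N\to\infty$ yields the holomorphic extension of $\Delta_P(s)$ to all of $\C$.
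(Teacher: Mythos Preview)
Your first cube-Taylor step is correct and extends $\Delta_P$ to $\Re(s) > d+n-2$. The gap is in the recursion. Once you decompose $M_\alpha(s)$ into pieces $c_j(s)\,\zeta^{Q_j}(s+2e_j)$ and then reapply the cube-Taylor procedure to each $\zeta^{Q_j}$, you split off integrals $\int_{\R^n\setminus B^n} Q_j(x)\,|x|^{-s-2e_j}\,dx$, and each of these has a simple pole. So as written your recursion only yields a \emph{meromorphic} continuation of $\Delta_P$, and you never argue that the poles cancel. They do cancel, because the sum of those integrals is exactly $\int_{|x|>R}\partial^\alpha f(x)\,dx$, which for $|\alpha|_1\geq 1$ reduces by one integration to a boundary integral over the sphere $|x|=R$ and is therefore entire in $s$; but this observation is precisely the missing step. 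There is also a sign error in your type (b) degree bound: one has $\deg Q \leq d - |\alpha|_1 + 2e$, not $d + |\alpha|_1 - 2e$ (check e.g.\ $P=x_1^2$, $\alpha=(2,0)$, which produces a term $x_1^4|x|^{-s-4}$ of degree $4=d-2+2\cdot 2$). This means the shift $2e$ exactly compensates the degree growth, so the gain in the recursion comes from $|\alpha|_1$ alone, not from the shift ``placing the argument into a half-plane where the inductive hypothesis applies''. Finally, the appeal to the classical continuation of $Z_n$ is unnecessary and a bit circular here, since the present lemma is what the paper uses (via Theorem~\ref{res-int}) to establish that continuation.

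The clean fix is to skip the type (a)/(b) decomposition and apply cube-Taylor directly to $M_\alpha(s)=\sum_{|k|\geq C}\partial^\alpha f(k)$: the integral you split off is then $\int_{|x|>C}\partial^\alpha f(x)\,dx$, manifestly entire by the boundary-term remark, and the new corrections are $M_{\alpha+\alpha'}(s)$ with $|\alpha'|_1\geq 2$, convergent in $\Re(s)>d+n-|\alpha|_1-2$; iterating finitely many times reaches any half-plane. The paper takes a different route altogether: instead of an $n$-dimensional cube-Taylor recursion, it applies the one-variable Euler--Maclaurin formula successively in each of the $n$ coordinates, with a smooth cutoff $\psi$ near the origin. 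That approach avoids any recursion or pole-cancellation bookkeeping, because at each stage the Euler--Maclaurin remainder is directly an absolutely convergent integral controlled by Lemma~\ref{majPQs}.
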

\begin{proof}
We fix in the sequel a function $\psi\in C^\infty(\R^n ,\R)$
verifying for all $x\in
\R^n$
$$
0\leq \psi(x) \leq 1, \quad \psi(x)=1 \text{ if }|x|\geq 1
\quad \text{and} \quad \psi(x)=0 \text{ if } |x|\leq 1/2.
$$
The function $f(x,s) :=  \psi (x)~P(x)~ |x|^{-s}$, $x\in \R^n$ and
$s\in \C$,
is in ${\cal
  C}^\infty (\R^n \times \C)$  and depends holomorphically on $s$.

Lemma \ref{majPQs} above shows that $f$ is
a ``gauged symbol'' in the terminology of \cite[p. 4]{GSW}.
Thus \cite[Theorem 2.1]{GSW} implies that $\Delta_P(s)$ extends
holomorphically on the whole complex plane $\C$. However, to be
complete, we will give here a short proof of Lemma \ref{deltaf}:
\par It follows from the classical Euler--Maclaurin formula that for
any
function $h: \R \rightarrow \C$ of class ${\cal C}^{N+1}$
verifying $ \lim_{|t|\rightarrow +\infty} h^{(k)}(t)=0$ and $\int_{\R}
|h^{(k)} (t)| ~dt <+\infty$ for any
$k=0 \dots,N+1$, that we have
$$
\sum_{k\in \Z} h(k) = \int_{\R} h(t) + \tfrac{(-1)^N}{(N+1)!}
\int_{\R} B_{N+1}(t)~h^{(N+1)}(t) ~dt
$$
where $B_{N+1}$ is the Bernoulli function of
order $N+1$ (it is a bounded periodic function).
\par Fix $m' \in \Z^{n-1}$ and $s\in
\C$. Applying this to the function $h(t):= \psi (m',t)~P(m',t)
~|(m',t)|^{-s}$ (we use Lemma \ref{majPQs} to verify hypothesis),
 we obtain that for any $N\in \N_0$:
\begin{equation}\label{*1}
\sum_{m_n \in \Z} \psi (m',m_n) ~P(m',m_n) ~|(m',m_n)|^{-s}
= \int_{\R} \psi(m',t)
~P(m',t) ~|(m',t)|^{-s} ~dt +{\cal R}_N(m';s)
\end{equation}
where $ {\cal R}_N(m';s):=\tfrac{(-1)^N}{(N+1)!} \int_{\R} B_{N+1}(t)~
\tfrac{\partial^{N+1}}{{\partial x_n}^{N+1}} \left(\psi(m',t)~P(m',t)
~|(m',t)|^{-s}\right)~dt$.\\
By Lemma \ref{majPQs},
$$
\int_{\R} {\Big |} B_{N+1}(t)~
\tfrac{\partial^{N+1}}{{\partial x_n}^{N+1}} \left(\psi (m',t)
~P(m',t)~|(m',t)|^{-s}\right){\Big |}~dt \ll_{P,n, N}
(1+|s|)^{N+1}~ (|m'|+1)^{-\sigma
-N+ \delta(P)}.
$$
Thus $ \sum_{m' \in \Z^{n-1}} {\cal R}_N(m';s)$
converges absolutely and define a holomorphic function in
the half plane $\{\sigma
=\Re (s) > \delta(P)+n-N\}$. \par Since $N$ is an arbitrary integer,
by letting
$N\rightarrow \infty$ and using $(\ref{*1})$ above, we conclude that:
$$s\mapsto \sum_{(m',m_n) \in \Z^{n-1}\times \Z} \psi (m',m_n)
~P(m',m_n)
~|(m',m_n)|^{-s}-\sum_{m'
  \in \Z^{n-1}} \int_{\R} \psi (m',t) ~P(m',t)~|(m',t)|^{-s}~dt$$ has
a holomorphic continuation to the whole complex plane $\C$.\par
After $n$ iterations, we obtain that
$$s\mapsto {\sum}_{m\in \Z^{n}} \psi(m)~P(m)
~|m|^{-s}-\int_{\R^n} \psi(x)~P(x) ~|x|^{-s}~dx$$ has a
holomorphic continuation to the whole $\C$.\\
To finish the proof of Lemma \ref{deltaf}, it is enough to notice
that:

\hspace{1 cm} $\bullet$ $\psi(0)=0$ and  $\psi (m)=1$, $\forall m\in
\Z^n\setminus \{0\}$;

\hspace{1 cm} $\bullet$ $s\mapsto \int_{B^n} \psi(x)~P(x)~|x|^{-s}~dx
= \int_{\{x\in \R^n : 1/2\leq |x|\leq 1\}} \psi(x)~P(x)~|x|^{-s}~dx$
is a holomorphic
function on $\C$.
\end{proof}

\begin{proof}[Proof of Theorem \ref{res-int}]

Using the polar decomposition of the volume form
$dx=\rho^{n-1}\,d\rho\, dS$ in $\R^n$, we get for $\Re(s)>d+n$,
$$
\int_{\R^n \setminus B^{n}} \tfrac{P_j(x)}{|x|^s}dx =
\int_{1}^{\infty}
\tfrac{\rho^{j+n-1}}{\rho^s}\int_{S^{n-1}} P_j(u)\, dS(u) =
\tfrac{1}{j+n-s}
\int_{S^{n-1}} P_j(u)\, dS(u).
$$
Lemma \ref{deltaf} now gives the result.
\end{proof}

\subsection{Holomorphy of certain series}
Before stating the main result of this section, we give first in the
following some
preliminaries from Diophantine approximation theory:

Let us recall Definition \ref{ba}.
Let $\delta >0$. A vector $a \in \R^n$ is said to be
$\delta-$diophantine
if there exists $c >0$ such that $|q . a -m| \geq c \,|q|^{-\delta}$,
$\forall q \in \Z^n \setminus \set{0}$ and $\forall m \in \Z$. 
We note ${\cal BV}(\delta )$ the set of $\delta-$diophantine
vectors and ${\cal
BV} :=\cup_{\delta >0} {\cal BV}(\delta)$ the set of diophantine vectors.

A matrix $\Th \in {\cal M}_{n}(\R)$ (real $n \times n$ matrices)
will be
said to be diophantine if there
exists $u \in \Z^n$ such that  ${}^t\Th (u)$ is a diophantine
vector of $\R^n$.

{\bf Remark.} A classical result from Diophantine approximation
asserts
that for all $\delta >n$, the Lebesgue measure of
$\R^n \setminus {\cal BV}(\delta)$ is zero (i.e almost any element of
$\R^n$ is $\delta-$diophantine).
\par Let $\Th \in {\cal M}_n(\R)$. If its row
of index $i$ is a diophantine vector of $\R^n$ (i.e. if $L_i
\in {\cal BV}$)
then ${}^t \Th (e_i) \in {\cal BV}$ and thus $\Th$ is a diophantine matrix. It
follows that almost any matrix of ${\cal M}_n(\R)\approx \R^{n^2}$ is
diophantine.\par

\medskip

The goal of this section is to show the following
\begin{theorem}\label{analytic}
Let $P\in \C[X_1,\cdots,X_n]$ be a homogeneous polynomial of degree
$d$ and let $b$ be in
$\mathcal{S}(\Z^{n} \times \dots \times \Z^{n})$ ($q$ times,
$q\in\N$). Then,

(i)
Let $a \in \R^n$. We define $f_a(s):={\sum}'_{k\in \Z^n}
\frac{P(k)}{|k|^s}\,
e^{2\pi i k.a}$.

\quad 1.
If $a\in \Z^n$, then $f_a$ has a meromorphic
continuation to the whole complex plane
$\C$.\\ Moreover
$f_a$ is not entire if and only if
$\int_{u\in S^{n-1}} P(u)\, dS(u)\neq 0$. In that case, $f_a$
has only a simple pole at the point $d+n$, with
$\underset{s=d+n}{\Res} \, f_a(s) = \int_{u\in
  S^{n-1}} P(u)\, dS(u)$.

\quad 2.
If $a\in \R^n\setminus \Z^n$, then $f_a(s)$ extends holomorphically
to the whole
complex plane $\C$.

(ii)
 Suppose that  $\Th \in {\cal M}_{n}(\R)$ is diophantine.
 For any $(\eps_i)_i\in \{-1,0,1\}^{q}$, the function
$$
g(s):={\sum}_{l\in (\Z^n)^{q}} \, b(l) \,f_{\Th\,
\sum_i \eps_i l_i}(s)
$$
extends meromorphically to the whole complex plane  $\C$
with only one possible pole
on $s= d+n$.\par Moreover, if we set
${\cal Z}:=\{l\in(\Z^n)^{q} \ : \ \sum_{i=1}^q \eps_i
l_i= 0\}$ and $V:=\sum_{l\in {\cal Z}} \, b(l)$, then

1. If $V\int_{S^{n-1}} P(u)\, dS(u)\neq 0$, then $s=d+n$ is a
simple pole of $g(s)$ and
$$
\underset{s=d+n}{\Res} \, g(s) = V\,
\int_{u\in
  S^{n-1}} P(u)\, dS(u).
$$

2. If $V\int_{S^{n-1}} P(u)\, dS(u)=0$, then $g(s)$ extends
holomorphically to the
whole complex plane $\C$.

(iii)
Suppose that $\Th \in \mathcal{M}_n(\R)$ is diophantine. For any $(\eps_i)_i\in
\{-1,0,1\}^{q}$, the function
$$
g_0(s):={\sum}_{l\in (\Z^n)^{q}\setminus
  {\cal Z}} \,
b(l)\,f_{\Th\, \sum_{i=1}^q \eps_i l_i}(s)
$$
where ${\cal Z}:=\{l\in(\Z^n)^{q} \ : \
\sum_{i=1}^q \eps_i l_i= 0\}$ extends holomorphically to the whole
complex plane $\C$.
\end{theorem}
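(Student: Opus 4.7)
I would prove the three parts in the order (i), (iii), (ii), since (ii) follows from (i) and (iii) by a trivial splitting of the sum.

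For (i).1, when $a\in\Z^n$ we have $e^{2\pi i k\cdot a}=1$, so $f_a = \zeta^P$ and Theorem \ref{res-int}, applied to the homogeneous polynomial $P$ of degree $d$, yields both the meromorphic continuation and the residue formula. For (i).2, with $a\in\R^n\setminus\Z^n$, I would pick an index $j$ with $a_j\notin\Z$ and run discrete summation by parts in the variable $k_j$ on $f_a(s)$. The identity $\Delta_j(e^{2\pi i k\cdot a}) = (1-e^{-2\pi i a_j})\,e^{2\pi i k\cdot a}$ extracts the nonzero factor $1-e^{-2\pi i a_j}$ and transfers a backward difference onto $P(k)/|k|^s$, which by Lemma \ref{majPQs} lowers the effective degree of the numerator by one (the boundary correction near $k=0$ coming from the prime in $\sum'$ is a finite sum, hence contributes an entire function). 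Iterating $N$ times gives an absolutely convergent representation on $\Re(s) > d+n-N$; letting $N\to\infty$ shows that $f_a$ is entire.

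Part (iii) is the technical heart. For $l\notin\mathcal{Z}$ I set $v_l := \sum_{i=1}^q \eps_i l_i \in \Z^n\setminus\{0\}$ and $a_l := \Th v_l$. The assumption that $\Th$ is diophantine supplies $u\in\Z^n$ and constants $c,\delta>0$ such that $|u\cdot a_l - m| = |v_l\cdot {}^t\Th(u) - m| \geq c|v_l|^{-\delta}$ for every $m\in\Z$, hence $|e^{2\pi i u\cdot a_l} - 1| \geq c'|v_l|^{-\delta}$ for some $c'>0$. I then imitate the summation by parts of (i).2 but with the lattice shift $u$ (using $\Delta_u e^{2\pi i k\cdot a_l} = (e^{-2\pi i u\cdot a_l}-1)\,e^{2\pi i k\cdot a_l}$), iterated $N$ times, and invoke Lemma \ref{majPQs} to bound the iterated difference quotients of $P(k)/|k|^s$. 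The outcome is the uniform bound
\begin{equation*}
|f_{a_l}(s)| \leq C_{N}(1+|s|)^{N}\,|v_l|^{N\delta}, \qquad \Re(s) > d+n-N.
\end{equation*}
Since $b\in\mathcal{S}((\Z^n)^q)$ decays faster than any polynomial in $|l|$ and $|v_l|$ grows at most linearly in $|l|$, the series $g_0(s) = \sum_{l\notin\mathcal{Z}} b(l) f_{a_l}(s)$ converges absolutely and locally uniformly on every such half-plane, so $g_0$ is entire. The main obstacle is exactly this estimate: the polynomial loss $|v_l|^{N\delta}$ from $N$-fold inversion of $e^{2\pi i u\cdot a_l}-1$ must be carefully paired against the decay produced by the $N$ discrete derivatives and against the Schwartz decay of $b$, uniformly in both $l$ and $s$.

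For (ii) I would decompose $g(s) = V\,f_0(s) + g_0(s)$ by separating $l\in\mathcal{Z}$ (all yielding $a_l=0$) from its complement. Part (i).1 describes the unique possible simple pole of $f_0=\zeta^P$ at $s=d+n$ with residue $\int_{S^{n-1}} P(u)\,dS(u)$, and (iii) gives that $g_0$ is entire, so the meromorphic continuation, the location of the unique possible pole, and the two sub-cases depending on whether $V\int_{S^{n-1}}P\,dS$ vanishes all follow at once.
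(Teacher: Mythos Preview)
Your proposal is correct and follows essentially the same strategy as the paper: the key device is the lattice shift by the diophantine vector $u$ (summation by parts), which extracts the controlled factor $(1-e^{2\pi i u\cdot a_l})^{-1}$ at each step while lowering the effective degree of $P(k)/|k|^s$ by one, yielding precisely the effective bound of Lemma~\ref{ieffective}. The only cosmetic difference is that the paper organizes the iteration as an induction on degree over a class $\mathcal{F}$ of rational symbols and passes through the smoothed sum $F_1$ with $(|k|^2+1)^{s/2}$ to sidestep the $k=0$ singularity, whereas you iterate summation by parts directly and treat the finitely many boundary terms near $k=0$ as an entire correction.
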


{\it Proof of Theorem \ref{analytic}}:
First we remark that

$\hspace{1cm}$ If $a \in \Z^n$ then
$f_a(s)={\sum}'_{k\in \Z^n} \frac{P(k)}{|k|^s}$. So,
the point $(i.1)$ follows from Theorem \ref{res-int};

$\hspace{1cm}$
$ g(s):=\sum_{l\in (\Z^n)^{q}\setminus {\cal Z}} \, b(l) \,f_{\Th\,
  \sum_i \eps_i l_i}(s)
+\left(\sum_{l\in {\cal Z}} \, b(l)\right) {\sum}'_{k\in \Z^n}
\frac{P(k)}{|k|^s}$. Thus, the point $(ii)$ rises

$\hspace{1.1cm}$easily from $(iii)$ and Theorem \ref{res-int}.

So, to complete the proof, it remains to prove the items $(i.2)$ and
$(iii)$.\par
The direct
proof of $(i.2)$ is easy but is not sufficient to deduce $(iii)$ of
which the proof is
more delicate and requires a more precise (i.e. more effective)
version of
$(i.2)$. The next lemma gives such crucial version, but before,
let us give some notations:
$$
{\cal F}:=\{\tfrac{P(X)}{(X_1^2+\dots +X_n^2+1)^{r/2}}
 \, :\,
P(X) \in \C[X_1,\dots, X_n] {\mbox { and }} r \in \N_0\}.
$$
\hspace{1cm}
  We set $g=$deg$(G) =$deg$(P) -r \in \Z$, the degree of
$G=\frac{P(X)}{(X_1^2+\dots +X_n^2+1)^{r/2}}\in
{\cal F}$.

\hspace{1cm} By convention we set deg$(0)=-\infty$.

\begin{lemma}\label{ieffective}
Let $a \in \R^n$. We assume that $d\left(a . u, \Z\right):=
\inf_{m\in \Z} |a . u  -m| >0$ for some $u \in \Z^n$.
For all $G\in {\cal F}$, we define formally,
\begin{align*}
    F_0(G;a;s):={\sum}'_{k\in\Z^n}
\tfrac{G(k)}{|k|^{s}}\, e^{2\pi i \,k . a}  \quad \text{and} \quad
F_1(G;a;s):={\sum}_{k \in \Z^n}
\tfrac{G(k)}{(|k|^2+1)^{s/2}} \,e^{2\pi i \,k . a} .
\end{align*}

Then for all $N\in \N$, all $G\in {\cal F}$ and all $i\in \{0,1\}$,
there exist
positive constants $C_i:=C_i(G,N,u)$, $B_i:=B_i(G,N,u)$ and
$A_i:=A_i(G,N,u)$ such
that $s\mapsto F_i(G;\a;s)$ extends  holomorphically to the half-plane
$\{\Re(s)>-N\}$ and verifies in it:
$$F_i(G;a;s)\leq C_i (1+|s|)^{B_i} \,
\big(d\left(a . u, \Z\right)\big)^{-A_i}.$$
\end{lemma}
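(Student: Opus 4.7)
The plan is to use Abel-type summation-by-parts in the direction $u \in \Z^n$, exploiting the lower bound
$$|e^{2\pi i\,u\cdot a} - 1| = 2|\sin(\pi\,u\cdot a)| \geq 4\,d(a\cdot u,\Z) > 0.$$
From the identity $e^{2\pi i k\cdot a} = (e^{2\pi i u\cdot a}-1)^{-1}\big(e^{2\pi i(k+u)\cdot a} - e^{2\pi i k\cdot a}\big)$ and a reindexation $k\mapsto k-u$ of the first piece, one formally gets, for $i\in\{0,1\}$,
$$F_i(G;a;s) = \tfrac{1}{e^{2\pi i u\cdot a} - 1}\sum_k \big(f_i(k-u;s) - f_i(k;s)\big)\,e^{2\pi i k\cdot a},$$
where $f_0(k;s):=G(k)|k|^{-s}$ (set to $0$ at $k=0$ to accommodate the prime) and $f_1(k;s):=G(k)(|k|^2+1)^{-s/2}$.

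First I would iterate this identity $M$ times to obtain
$$F_i(G;a;s) = \tfrac{1}{(e^{2\pi i u\cdot a} - 1)^M}\sum_k \Delta_u^M f_i(k;s)\,e^{2\pi i k\cdot a},$$
with $\Delta_u^M f_i(k;s):=\sum_{j=0}^M (-1)^{M-j}\binom{M}{j}\, f_i(k-ju;s)$. For $|k|$ large enough that the segment from $k$ to $k-Mu$ avoids the origin, the fundamental theorem of calculus gives
$$\Delta_u^M f_i(k;s) = (-1)^M \int_{[0,1]^M}(u\cdot\nabla)^M f_i\bigl(k-(t_1+\cdots+t_M)u;s\bigr)\,dt,$$
and Lemma \ref{majPQs} (together with its straightforward analogue where $|x|^{-s}$ is replaced by $(|x|^2+1)^{-s/2}$, which is actually smoother since there is no singularity) yields the pointwise bound
$$\big|\Delta_u^M f_i(k;s)\big| \ll_{G,u,M,n} (1+|s|)^M\,(1+|k|)^{g - M - \Re(s)},\quad g:=\deg G,$$
uniformly for $|k|\geq 2M|u|+1$.

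Choosing $M:=\lceil g+n+N\rceil + 1$, the right-hand side series converges absolutely and locally uniformly on $\{\Re(s) > -N\}$, producing the desired holomorphic extension. Combined with the $M$-fold application of the lower bound on $|e^{2\pi i u\cdot a}-1|$, this yields the claimed estimate with $A_i=B_i=M$ and $C_i=C_i(G,N,u,n)$.

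The main obstacle, mild but genuine, is the treatment of small $|k|$, especially for $F_0$ because of the singularity/exclusion at the origin. I would isolate it with a smooth cutoff $\psi\in C^\infty(\R^n)$ that equals $1$ for $|k|\geq 2M|u|+2$ and vanishes in a neighborhood of the finite set $\{0,u,2u,\ldots,Mu\}$. Then $\psi\cdot f_0$ is smooth on all of $\R^n$ and the Abel iteration applies unconditionally to it, while the leftover piece, being a finite sum over a fixed bounded set $\{k\in\Z^n : \psi(k)\neq 1\}\setminus\{0\}$ of entire terms $G(k)|k|^{-s}e^{2\pi i k\cdot a}$, is itself entire in $s$ and trivially satisfies a bound of the required form. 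For $F_1$ no cutoff is needed. The remaining bookkeeping consists in verifying that all constants depend only on $G$ (through $\deg G$ and the coefficients of the defining polynomial $P$), $N$, $u$, and $n$.
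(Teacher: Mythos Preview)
Your argument is correct, and it takes a genuinely different route from the paper's.

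The paper proceeds in two stages. For $F_1$ it argues by induction on $\deg G$ inside the class $\mathcal{F}$: a single translation $k\mapsto k+u$ together with a Taylor expansion of $(1+\frac{2k\cdot u+|u|^2}{|k|^2+1})^{-(s+r)/2}$ writes $(1-e^{2\pi i u\cdot a})F_1(G;a;s)$ as a finite linear combination of $F_1(G_{(\alpha,j);u};a;s)$ with $\deg G_{(\alpha,j);u}\le \deg G-1$, plus a harmless remainder; the induction then closes. For $F_0$ the paper does not run a separate argument but instead reduces to $F_1$ via the expansion $|k|^{-s}=(|k|^2+1)^{-s/2}(1-(|k|^2+1)^{-1})^{-s/2}$.

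Your approach replaces the induction-on-degree by a direct $M$-fold Abel iteration, turning one factor of $(1-e^{2\pi i u\cdot a})^{-1}$ into $M$ factors at once and landing on the $M$-th finite difference $\Delta_u^M f_i$, which you then estimate through its integral representation and Lemma~\ref{majPQs}. This avoids the class-$\mathcal{F}$ bookkeeping entirely and treats $F_0$ and $F_1$ on the same footing, at the modest cost of the cutoff $\psi$ needed to smooth out the origin for $F_0$. (Incidentally, $\psi$ only needs to vanish near $0$; the extra vanishing near $u,\dots,Mu$ is unnecessary, since $\psi f_0$ is globally smooth and the integral representation of $\Delta_u^M(\psi f_0)$ is valid for every $k$.) The two arguments are really the same mechanism---translation by $u$ creates the small denominator---packaged differently: the paper extracts one power of the denominator per step and descends in degree, you extract all $M$ powers in one stroke and descend in decay rate. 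Your version is a bit more self-contained; the paper's version makes the role of $\mathcal{F}$ more explicit and keeps the reduction $F_0\to F_1$ separate, which is convenient elsewhere in the chapter.
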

\begin{remark} The important point here is that we obtain an explicit
  bound of $F_i(G;\a;s)$ in $\{\Re(s)>-N\}$ which depends on the
  vector $a$ only through $d(a.u,\Z)$, so depends on $u$ and
indirectly on $a$ (in the sequel, $a$ will vary). In particular the
constants $C_i:=C_i(G,N,u)$,
$B_i=B_i(G,N)$ and $A_i:=A_i(G,N)$ do not depend on the vector $a$
but only on
$u$. {\it This is crucial for the proof of items $(ii)$ and
$(iii)$ of Theorem \ref{analytic}!}
\end{remark}

\subsubsection{Proof of Lemma \ref{ieffective} for $i=1$:}
Let $N\in \N_0$ be a fixed integer, and set $g_0:= n+N+1$.\\
We will prove Lemma \ref{ieffective} by induction on $g=$deg$(G)\in
\Z$. More
precisely, in order to prove case $i=1$, it suffices to
prove that:

\hspace{1cm} Lemma \ref{ieffective} is true for all $G\in {\cal F}$
  verifying deg$(G)\leq -g_0$.

\hspace{1cm} Let $g\in \Z$ with $g\geq -g_0+1$.
If Lemma \ref{ieffective} is true for all $G\in {\cal F}$
  such that deg$(G)\leq g -1$,

\hspace{1cm} then it is also true for all
$G\in {\cal F}$ satisfying deg$(G)= g$.

$\bullet$ {Step 1: Checking Lemma
\ref{ieffective} for
deg$(G) \leq -g_0:= -(n+N+1)$.}\\
Let $G(X)=\frac{P(X)}{(X_1^2+\dots +X_n^2+1)^{r/2}} \in {\cal F}$
verifying deg$(G)\leq -g_0$.
It is easy to see that we have uniformly in  $s=\sigma +i\tau  \in \C$
and in $k \in
\Z^n$:
\begin{align*}
\tfrac{|G(k) \,e^{2\pi i \,k .
a}|}{(|k|^2+1)^{\sigma/2}}=&\tfrac{|P(k)|}{(|k|^2+1)^{(r+\sigma)/2}}\ll_G
\tfrac{1}{(|k|^2+1)^{(r+\sigma-deg(P))/2}}
\ll_G  \tfrac{1}{(|k|^2+1)^{(\sigma-deg(G))/2}}\ll_G
\tfrac{1}{(|k|^2+1)^{(\sigma+g_0)/2}}.
\end{align*}
It follows that $F_1(G;a;s)=\sum_{k \in \Z^n }
\frac{G(k)}{(|k|^2+1)^{s/2}} \,e^{2\pi i
\, k . a}$ converges absolutely and defines a holomorphic function in
the half plane
$\{\sigma >-N\}$. Therefore, we have for any $s\in \{\Re(s) >-N\}$:
$$|F_1(G;a;s)|\ll_G \sum_{k \in \Z^n }
\tfrac{1}{(|k|^2+1)^{(-N+g_0)/2}}\ll_G \sum_{k \in \Z^n }
\tfrac{1}{(|k|^2+1)^{(n+1)/2}}\ll_G 1.$$
Thus, Lemma \ref{ieffective} is true when deg$(G)\leq -g_0$.

$\bullet$ { Step 2: Induction. }\\
Now let $g\in \Z$ satisfying $g\geq -g_0+1$ and suppose that Lemma
\ref{ieffective} is
valid for all $G\in {\cal F}$ verifying deg$(G) \leq g-1$. Let $G\in
{\cal F}$ with deg$(G)=g$. We will prove that  $G$ also
verifies conclusions of Lemma \ref{ieffective}:\\
There exist $P\in \C[X_1,\dots,X_n]$ of degree $d\geq 0$ and
$r\in \N_0$ such that $G(X)=\frac{P(X)}{(X_1^2+\dots
+X_n^2+1)^{r/2}}$ and
$g=$deg$(G)=d-r$.\\
Since $G(k)\ll (|k|^2+1)^{g/2}$ uniformly in $k \in \Z^n$, we deduce
that
$F_1(G;a;s)$ converges absolutely in $\{\sigma=\Re(s)>n+g\}$.\\
Since $k\mapsto k+u$ is a bijection from  $\Z^n $ into $\Z^n$, it
follows
that we also have for $\Re(s)>n+g$
\begin{align*}
F_1(G;a;s)&=\sum_{k \in \Z^n }
\tfrac{P(k)}{(|k|^2+1)^{(s+r)/2}} \,e^{2\pi i \,k . a}
= \sum_{k \in \Z^n }
\tfrac{P(k+u)}{(|k+u|^2+1)^{(s+r)/2}} \,  e^{2\pi i \, (k+u) .  a}\\
&= e^{2\pi i \,u . a} \sum_{k \in \Z^n } \tfrac{P(k+u)}{(|k|^2+2 k . u
+|u|^2+1)^{(s+r)/2}} \,e^{2\pi i \,k . a}\\
&= e^{2\pi i \,u . a} \sum_{\alpha \in \N_0^n;
|\alpha|_1=\alpha_1+\dots+\alpha_n \leq
d} \tfrac{u^{\alpha}}{\alpha !} \sum_{k \in \Z^n }
\tfrac{\partial^{\alpha}
P(k)}{(|k|^2+2 k . u +|u|^2+1)^{(s+r)/2}} \, e^{2\pi i \,k . a}\\
&= e^{2\pi i \,u . a} \sum_{|\alpha|_1\leq d}
\tfrac{u^{\alpha}}{\alpha !} \sum_{k \in
\Z^n } \tfrac{\partial^{\alpha} P(k)}{(|k|^2+1)^{(s+r)/2}}
\big(1+\tfrac{2 k . u
+|u|^2}{(|k|^2+1)}\big)^{-(s+r)/2} \, e^{2\pi i \,k . a}.
\end{align*}
Let $M:= \sup(N+n+g, 0)\in \N_0$. We have uniformly in $k \in \Z^n$
$$
\big(1+\tfrac{2 k . u + |u|^2}{(|k|^2+1)}\big)^{-(s+r)/2}=
\sum_{j=0}^M \genfrac(){0pt}{1}{-(s+r)/2}{j} \tfrac{\left(2 k . u +
|u|^2\right)^j}{(|k|^2+1)^j}+
O_{M, u}\big(  \tfrac{(1+|s|)^{M+1}}{(|k|^2+1)^{(M+1)/2}}\big).
$$
Thus, for
$\sigma =\Re(s)>n+d$,
\begin{eqnarray}\label{f0dev}
F_1(G;a;s)&=& e^{2\pi i \,u . a} \sum_{|\alpha|_1 \leq d}
\tfrac{u^{\alpha}}{\alpha !}
\sum_{k \in \Z^n } \tfrac{\partial^{\alpha}
P(k)}{(|k|^2+1)^{(s+r)/2}} \big(1+\tfrac{2
k . u +|u|^2}{(|k|^2+1)}\big)^{-(s+r)/2}
e^{2\pi i \,k . a}\nonumber \\
&=& e^{2\pi i \,u . a} \sum_{|\alpha|_1 \leq d} \sum_{j=0}^M
\tfrac{u^{\alpha}}{\alpha
!}  \genfrac(){0pt}{1}{-(s+r)/2}{j} \sum_{k \in \Z^n }
\tfrac{\partial^{\alpha} P(k) \left(2 k . u
+|u|^2\right)^j } {(|k|^2+1)^{(s+r+2j)/2}} \, e^{2\pi i
  \,k . a}\nonumber \\
& & \hspace{1cm}+O_{G, M,u} \big((1+|s|)^{M+1}\sum_{k \in \Z^n }
\tfrac{1}{(|k|^2+1)^{(\sigma
+M+1-g)/2}}\big).
\end{eqnarray}
Set $I:=\left\{(\alpha ,j) \in \N_0^n \times \{0,\dots,M\} \mid
|\alpha|_1 \leq d\right\}$ and $I^*:=I\setminus \set{(0,0)}$.\\
Set also $ G_{(\alpha ,j);u}(X):= \tfrac{\partial^{\alpha} P(X)
\left(2 X . u
+|u|^2\right)^j }
{(|X|^2+1)^{(r+2j)/2}}\in {\cal F}$ for all $(\alpha ,j) \in I^*$.\\
Since $M\geq N+n+g$, it follows from (\ref{f0dev}) that
\begin{eqnarray}\label{crucial1}
(1-e^{2\pi i \,u . a})~F_1(G;a;s)= e^{2\pi i \,u . a} \sum_{(\alpha,
j)\in I^*}
\tfrac{u^{\alpha}}{\alpha !}  \genfrac(){0pt}{1}{-(s+r)/2}{j}
F_1\left(G_{(\alpha ,j);u};\a;s\right)  +R_N(G; a; u; s)
\end{eqnarray}
where $s\mapsto R_N(G; a; u; s)$ is a holomorphic function in the
half plane
$\{\sigma =\Re(s) >-N\}$, in which it satisfies the bound
$R_N(G; a; u; s)\ll_{G,N,u} 1 $.\\
Moreover it is easy to see that, for any $(\alpha, j)\in I^*$,
$$
\text{deg}\hspace{-.06cm}\left(G_{(\alpha
,j);u}\right)=\text{deg}\hspace{-.05cm}\left(\partial^{\alpha}
P \right)+j -(r+2j)\leq d-|\alpha|_1 +j -(r+2j)=g-|\alpha|_1 -j\leq
g-1.
$$
Relation (\ref{crucial1}) and the induction hypothesis imply then that
\begin{equation}\label{crucial2}
(1-e^{2\pi i \,u . a })~F_1(G;a;s) {\mbox { verifies the conclusions
of Lemma \ref{ieffective}}}.
\end{equation}
Since $ |1-e^{2\pi i \,u . a}|=2|\sin(\pi u . a)|\geq d\left(u  . a,
\Z\right)$,
then (\ref{crucial2}) implies that $F_1(G;a;s)$ satisfies
conclusions of Lemma \ref{ieffective}. This completes the induction
and the proof for $i=1$.

\subsubsection{Proof of Lemma \ref{ieffective} for $i=0$:}
Let  $N\in \N$ be a fixed integer.
Let $G(X)=\frac{P(X)}{(X_1^2+\dots +X_n^2+1)^{r/2}}
\in {\cal F}$ and $g=$ deg$(G)=d-r$ where $d\geq 0$ is the degree of
the polynomial $P$.
Set also $M:=\sup (N+g+n, 0)\in \N_0$.\par
Since $P(k)\ll |k|^d$ for $k \in \Z^n\setminus
\set{0}$, it follows that $F_0(G;a;s)$ and $F_1(G;a;s)$ converge
absolutely in the
half plane
$\{\sigma =\Re(s)>n+g\}$.\\
Moreover, we have for $s=\sigma +i\tau  \in \C$ verifying $\sigma
>n+g$:
\begin{align}\label{crucial3}
F_0(G;a;s)&= \sum_{k \in \Z^n \setminus \set{0}}
\tfrac{G(k)}{(|k|^2+1-1)^{s/2}} \, e^{2\pi i \,k . a}
={\sum_{k \in \Z^n }}' \tfrac{G(k)}{(|k|^2+1)^{s/2}}
\left(1-\tfrac{1}{|k|^2+1}\right)^{-s/2} \,
e^{2\pi i \,k . a} \nonumber \\
&= {\sum_{k \in \Z^n }}' \,  \sum_{j=0}^M
\genfrac(){0pt}{1}{-s/2}{j} (-1)^j
\tfrac{G(k)}{(|k|^2+1)^{(s+2j)/2}} \,
e^{2\pi i \,k . a} \nonumber\\
&\hspace{2cm} +O_M\big((1+|s|)^{M+1} {\sum_{k \in \Z^n }}'
\tfrac{|G(k)|}{(|k|^2+1)^{(\sigma +2M+2)/2}}\big)\nonumber \\
&=\sum_{j=0}^M  \genfrac(){0pt}{1}{-s/2}{j}
(-1)^j F_1(G;a; s+2j) \nonumber\\
&\hspace{2cm}+ O_M\big[(1+|s|)^{M+1} \big(1+{\sum_{k \in \Z^n }}'
\tfrac{|G(k)|}{(|k|^2+1)^{(\sigma +2M+2)/2}}\big)\big].
\end{align}
In addition we have  uniformly in $s=\sigma +i\tau \in \C$ verifying
$\sigma >-N$,
$$
{\sum_{k \in \Z^n }}'
\tfrac{|G(k)|}{(|k|^2+1)^{(\sigma +2M+2)/2}}\ll {\sum_{k \in \Z^n }}'
\tfrac{|k|^g}{(|k|^2+1)^{(-N +2M+2)/2}}\ll {\sum_{k \in \Z^n }}'
\tfrac{1}{|k|^{n+1}}<+\infty.
$$
So (\ref{crucial3}) and Lemma \ref{ieffective} for
$i=1$ imply that Lemma \ref{ieffective} is also true for $i=0$. This
completes the
proof of Lemma \ref{ieffective}. \qed

\subsubsection{Proof of item $(i.2)$ of Theorem \ref{analytic}:}
Since  $a\in \R^n\setminus \Z^n$, there exists $i_0\in
\{1,\dots,n\}$ such that $a_{i_0}\not \in \Z$.
In particular $d(a . e_{i_0}, \Z)=d(a_{i_0},\Z)>0$. Therefore,  $a$
satisfies the assumption of Lemma \ref{ieffective} with  $u=e_{i_0}$.
Thus, for all $N\in \N$, $s\mapsto f_{a}(s)=F_0(P;a;s)$ has a
holomorphic
continuation to the half-plane $\{\Re (s)>-N\}$. It follows, by
letting
$N\rightarrow \infty$, that $s\mapsto f_{a}(s)$ has a holomorphic
continuation
to the whole complex plane $\C$.

\subsubsection{Proof of item $(iii)$ of Theorem \ref{analytic}:}
Let $\Th\in {\cal M}_n(\R)$, $(\eps_i)_i\in \{-1,0,1\}^{q}$ and $b
\in {\cal
S}(\Z^n\times \Z^n)$. We assume that
$\Th$ is a diophantine matrix. Set ${\cal Z}:= \set{
l=(l_1,\dots,l_q)\in (\Z^n)^q \,:  \, \sum_i \eps_i l_i =0}$ and
$P\in \C[X_1,\dots,X_n]$ of degree  $d\geq 0$.\\
It is easy to see that for $\sigma >n+d$:
\begin{align*}
\sum_{l\in (\Z^n)^{q}\setminus {\cal Z}} \, |b(l)| \,  {\sum_{k \in
\Z^n }}'
\tfrac{|P(k)|}{|k|^{\sigma}} \,|e^{2\pi i \,k . \Th\, \sum_i \eps_i
l_i }| & \ll_P  \sum_{l\in (\Z^n)^q\setminus {\cal Z}} |b (l)| \,
{\sum_{k \in \Z^n }}'
\tfrac{1}{|k|^{\sigma -d}}
\ll_{P,\sigma} \sum_{l \in (\Z^n)^q \setminus {\cal Z}} |b (l)|\\
&<+\infty.
\end{align*}
So
$$
g_0(s):=\sum_{l \in (\Z^n)^q \setminus {\cal Z}}
b (l) \, f_{\Th\, \sum_i \eps_i l_i}(s)= \sum_{l \in (\Z^n)^q
\setminus
  {\cal Z}}  b(l) \, {\sum_{k \in \Z^n }}'
\tfrac{P(k)}{|k|^{s}} e^{2\pi i \,k . \Th \, \sum_i \eps_i l_i}
$$
converges absolutely in the half plane $\{\Re(s) >n+d\}$.\\
Moreover with the notations of Lemma \ref{ieffective}, we have for
all $s=\sigma
+i\tau \in \C$ verifying $\sigma >n+d$:
\begin{equation}\label{crucial4}
g_0(s)=\sum_{l \in (\Z^n)^q\setminus {\cal Z}}
b(l) f_{\Th\, \sum_i \eps_i l_i}(s)=\sum_{l
\in (\Z^n)^q  \setminus {\cal Z}} b(l) F_0(P; \Th\, {\sum}_i \eps_i
l_i; s)
\end{equation}
But $\Th$ is diophantine, so there exists $u \in \Z^n$ and
$\delta ,c>0$ such
$$
|q . \,{}^t\Th u -m| \geq c\, (1+|q|)^{-\delta},\,  \forall q \in
\Z^n \setminus \set{0},
\, \forall m\in \Z.
$$
We deduce that $\forall l \in (\Z^n)^q \setminus {\cal Z},$
$$
\quad |\big(\Th\, {\sum}_i \eps_i l_i \big) .  u -m|= |\big({\sum}_i
\eps_i l_i\big) . {}^t\Th u -m| \geq c \,
\big(1+|{\sum}_i \eps_i l_i| \big)^{-\delta} \geq c \,
(1+|l|)^{-\delta}.
$$
It follows that there exists $u \in \Z^n$, $\delta >0$ and $c>0$ such
that
\begin{equation}\label{hypothesisOK!}
\forall l \in (\Z^n)^q \setminus {\cal Z}, \quad d\big((\Th\,
{\sum}_i \eps_i l_i) .  u;
\Z\big) \geq c \, (1+|l|)^{-\delta}.
\end{equation}
{\it Therefore, for any $l \in (\Z^n)^q \setminus {\cal Z}$,
the vector $a=\Th\, \sum_i \eps_i l_i$
verifies the assumption of Lemma \ref{ieffective} with the same $u$.
Moreover
$\delta$ and $c$ in
(\ref{hypothesisOK!}) are also independent on $l$}.\\
We fix now $N\in \N$. Lemma \ref{ieffective} implies that there exist
positive
constants $C_0:=C_0(P,N,u)$, $B_0:=B_i(P,N,u)$ and $A_0:=A_0(P,N,u)$
such that for all
$l \in (\Z^n)^q \setminus {\cal Z}$, $s\mapsto F_0(P; \Th\, \sum_i
\eps_i l_i; s)$ extends
{\it holomorphically} to the half plane $\{\Re(s)>-N\}$ and verifies
in it the bound
$$
F_0(P;\Th\, \sum_i \eps_i l_i; s)\leq C_0 \left(1+|s|\right)^{B_0} \,
d\big((\Th\, {\sum}_i \eps_i l_i) .  u; \Z\big) ^{-A_0}.
$$
This and (\ref{hypothesisOK!}) imply that for any compact set $K$
included
in the half plane $\{\Re(s)>-N\}$, there exist two constants
$C:=C(P,N,c, \delta,u, K)$ and  $D
:=D(P,N,c,\delta, u)$ (independent on $l \in (\Z^n)^q \setminus {\cal
Z}$) such that
\begin{equation}\label{crucialbig}
\forall s\in K {\mbox { and }} \forall l \in (\Z^n)^q \setminus {\cal
Z}, \quad F_0(P; \Th\,
\sum_i \eps_i l_i; s)\leq C \left(1+|l|\right)^{D}.
\end{equation}
It follows that $s\mapsto \sum_{l \in (\Z^n)^q \setminus {\cal Z}}
b(l) F_0(P;
\Th\, {\sum}_i \eps_i l_i;s)$
has a holomorphic continuation to the half plane
$\{\Re(s)>-N\}$.\\
This and ( \ref{crucial4}) imply that $
s\mapsto g_0(s)=\sum_{l \in (\Z^n)^q \setminus {\cal Z}} b(l)
f_{\Th\, \sum_i \eps_i l_i}(s)$
has a holomorphic continuation to
$\{\Re(s)>-N\}$. Since $N$ is an arbitrary integer, by letting
$N\rightarrow \infty$,
it follows that $s\mapsto g_0(s)$ has a holomorphic continuation to
the whole
complex plane $\C$ which completes the proof of the theorem. \qed

\begin{remark}
\label{remdiophantine}
By equation (\ref{crucial2}), we see that a Diophantine
condition is sufficient to get Lemma \ref{ieffective}. Our Diophantine
condition appears also (in equivalent form) in Connes
\cite[Prop. 49]{NCDG} (see Remark 4.2 below). The following heuristic
argument shows that our condition seems to be necessary in order
to get the result of Theorem \ref{analytic}:\\
For simplicity we assume $n=1$ (but the argument extends easily to
any $n$).\\
Let $\theta \in \R \setminus \Q$. We know (see this reflection
formula in \cite[p. 6]{Elizaldebook})
that for any $l\in \Z\setminus\{0\}$,
$$
g_{\theta l}(s):={\sum_{k\in \Z}}' \, \tfrac{e^{2\pi i \theta l
k}}{|k|^s}=
\tfrac{\pi^{s-1/2}}{\Gamma (\tfrac{1-s}{2})}{\Gamma (\tfrac{s}{2})}
~h_{\theta l}(1-s) {\mbox { where }}
h_{\theta l}(s):={\sum_{k\in \Z}}'\,\tfrac{1}{|\theta l+ k|^s}.
$$
So, for any $(a_l) \in {\cal S}(\Z)$, the existence of meromorphic
continuation of
$g_0(s):=\sum_{l\in \Z} ' a_l \, g_{\theta l}(s)$ is equivalent to
the
existence of meromorphic continuation of
$$
h_0(s):={\sum_{l\in \Z}}' a_l \, h_{\theta l}(s)=
{\sum_{l\in \Z}} ' a_l \, {\sum_{k\in \Z}} ' \tfrac{1}{|\theta l+
k|^s}.
$$
So, for at least one $\sigma_0 \in \R$, we must have
$\tfrac{|a_l|}{|\theta l+ k|^{\sigma_0}} = O(1)
{\mbox { uniformly in }} k, l \in \Z^*.$

It follows that for any $(a_l) \in {\cal S}(\Z)$,
$|\theta l+ k| \gg |a_l|^{1/\sigma_0}$
uniformly in $k, l \in \Z^*$. Therefore, our Diophantine condition
seems
to be necessary.
\end{remark}

\subsubsection{Commutation between sum and residue}

Let $p\in \N$. Recall that $\mathcal{S}((\Z^n)^p)$ is the set of the
Schwartz
sequences on $(\Z^n)^p$. In other words, $b\in \mathcal{S}((\Z^n)^p)$
if and only if
for all $r\in \N_{0}$, $(1+|l_1|^2+\cdots
|l_p|^2)^r\,|b(l_1,\cdots,l_p)|^2$ is bounded on
$(\Z^n)^p$. We note that if $Q\in\R[X_1,\cdots,X_{np}]$ is a
polynomial, $(a_j)\in
\mathcal{S}(\Z^n)^{p}$, $b\in \mathcal{S}(\Z^n)$ and $\phi$ a
real-valued function, then
 $l:=(l_1,\cdots,l_p)\mapsto \wt a(l)\, b(-\wh l_p)\, Q(l)\,
 e^{i\phi(l)}$ is a Schwartz
sequence on $(\Z^n)^p$, where
\begin{align*}
\wt a(l) &:= a_1(l_1)\cdots a_p(l_p),\\
\wh l_i &:= l_1+\ldots +l_i.
\end{align*}

In the following, we will use several times the fact that for any
$(k,l)\in (\Z^n)^2$
such that $k\neq0$ and $k\neq -l$, we have
\begin{equation}\label{trick-0}
\frac{1}{|k+l|^2} = \frac{1}{|k|^2} -
\frac{2k.l+|l|^2}{|k|^2|k+l|^2}\,.
\end{equation}

\begin{lemma}\label{R-poly} There exists a polynomial $P\in
\R[X_1,\cdots,X_p]$ of degree $4p$ and with positive coefficients
such that for any
$k\in \Z^n$, and $l:=(l_1,\cdots,l_p)\in (\Z^n)^p$ such that $k\neq
0$ and $k\neq -\wh
l_i$ for all $1\leq i \leq p$, the following holds:
$$
\frac{1}{|k+\wh l_1|^2\ldots |k+\wh l_p|^2} \leq \frac{1}{|k|^{2p}}\
P(|l_1|,\cdots,|l_p|).
$$
\end{lemma}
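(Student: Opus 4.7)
The plan is to reduce the multi-variable inequality to a single uniform one-variable estimate of the form $|k|^2/|k+l|^2 \leq Q(|l|)$ with $Q$ a polynomial with positive coefficients, and then apply it factor by factor to the product. The crucial (and only) arithmetic input is that, since $k,l\in \Z^n$, the assumption $k+l\neq 0$ promotes $|k+l|^2\geq 1$, which turns the harmless-looking $|k+l|^2$ in the denominator into something controllable even when $|k|$ is small.

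First I would establish the single-factor bound by a two-regime argument. If $|k|\leq 2|l|$, then using $|k+l|^2\geq 1$ one gets
\[
\frac{|k|^2}{|k+l|^2}\leq |k|^2\leq 4|l|^2.
\]
If $|k|>2|l|$, the reverse triangle inequality yields $|k+l|\geq |k|-|l|\geq |k|/2$, so
\[
\frac{|k|^2}{|k+l|^2}\leq 4.
\]
Combining the two regimes gives the uniform bound $|k|^2/|k+l|^2\leq 4+4|l|^2$, valid for every $k\in\Z^n\setminus\{0\}$ and every $l\in\Z^n$ with $k+l\neq 0$, with a polynomial of degree $2$ in $|l|$ and positive coefficients independent of $k$.

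Then I would apply this estimate to each of the $p$ factors, taking $l=\wh l_i$. Writing
\[
\frac{|k|^{2p}}{|k+\wh l_1|^2\cdots|k+\wh l_p|^2}=\prod_{i=1}^{p}\frac{|k|^2}{|k+\wh l_i|^2}\leq \prod_{i=1}^{p}\bigl(4+4|\wh l_i|^2\bigr),
\]
and using $|\wh l_i|\leq |l_1|+\cdots+|l_p|$, the right-hand side is bounded by
\[
\prod_{i=1}^{p}\bigl(4+4(|l_1|+\cdots+|l_p|)^2\bigr),
\]
which expands to a polynomial in $|l_1|,\ldots,|l_p|$ with positive coefficients of total degree $2p$ (hence, crucially, of degree at most $4p$). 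Dividing through by $|k|^{2p}$ then gives the stated inequality.

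There is essentially no obstacle here: the hypotheses $k\neq 0$ and $k\neq -\wh l_i$ are precisely what is needed to keep $|k|$ and each $|k+\wh l_i|$ nonzero integers, and hence $\geq 1$, which is all that the two-regime argument requires. The minor imprecision that my construction yields degree $2p$ rather than exactly $4p$ is harmless, since the statement only requires a polynomial bound of degree $4p$.
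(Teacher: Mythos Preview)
Your proof is correct and takes a genuinely different route from the paper's. The paper iterates the algebraic identity
\[
\frac{1}{|k+l|^2}=\frac{1}{|k|^2}-\frac{2k\cdot l+|l|^2}{|k|^2|k+l|^2}
\]
twice, then combines Cauchy--Schwarz with the integer bound $|k+\wh l_i|^2\geq 1$ to obtain $\tfrac{1}{|k+\wh l_i|^2}\leq \tfrac{5}{|k|^2}\bigl(1+4|\wh l_i|^4\bigr)$, leading to $P(X_1,\ldots,X_p)=5^p\bigl(1+4(\sum_j X_j)^4\bigr)^p$ of degree exactly $4p$. Your two-regime triangle-inequality argument is more elementary, avoids the algebraic identity entirely, and yields the sharper degree $2p$; it exploits the integer constraint $|k+l|\geq 1$ in the same way but only once, in the regime $|k|\leq 2|l|$. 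The paper's approach is natural there because that same identity is the engine behind the asymptotic expansions elsewhere in the chapter, but for this lemma your argument is cleaner. The degree discrepancy is indeed harmless: the lemma is only used as a crude polynomial bound in the subsequent summability estimate, and one can trivially pad your polynomial to degree $4p$ if a literal reading is insisted upon.
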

\begin{proof} Let's fix $i$ such that $1\leq i\leq p$.
Using two times (\ref{trick-0}), Cauchy--Schwarz inequality and the
fact that $|k+\wh
l_i|^2\geq 1$, we get
\begin{align*}
\tfrac{1}{|k+\wh l_i|^2} &\leq \tfrac{1}{|k|^2} + \tfrac{2|k||\wh
l_i|+|\wh
l_i|^2}{|k|^4} + \tfrac{(2|k||\wh l_i|+|\wh l_i|^2)^2}{|k|^4|k+\wh
l_i|^2}\\
&\leq \tfrac{1}{|k|^2}+ \tfrac{2}{|k|^3}|\wh l_i| +
\big(\tfrac{1}{|k|^4}+\tfrac{4}{|k|^2}\big)|\wh l_i|^2 +
\tfrac{4}{|k|^3}|\wh l_i|^3 +
\tfrac{1}{|k|^4}|\wh l_i|^4.
\end{align*}
Since $|k|\geq 1$, and $|\wh l_i|^j \leq |\wh l_i|^4$ if $1\leq j\leq
4$, we find
\begin{align*}
&\tfrac{1}{|k+\wh l_i|^2} \leq \tfrac{5}{|k|^2} {\sum}_{j=0}^4 \,|\wh
l_i|^j \leq
\tfrac{5}{|k|^2} \big(1 + 4 |\wh l_i|^4\big)
\leq \tfrac{5}{|k|^2} \big(1 + 4 ({\sum}_{j=1}^p \, |l_j|)^4\big), \\
&\tfrac{1}{|k+\wh l_1|^2\ldots |k+\wh l_p|^2}  \leq
\tfrac{5^p}{|k|^{2p}} \big(1 + 4
({\sum}_{j=1}^p \,|l_j|)^{4}\big)^p.
\end{align*}
Taking $P(X_1,\cdots,X_p):= 5^p \big(1 + 4 ({\sum}_{j=1}^p
X_j)^{4}\big)^p$ now gives
the result.
\end{proof}

\begin{lemma}\label{abs-som}
Let $b\in \mathcal{S}((\Z^n)^p)$, $p\in\N$, $P_j\in
\R[X_1,\cdots,X_n]$ be a
homogeneous polynomial function of degree $j$, $k\in \Z^n$,
$l:=(l_1,\cdots,l_p)\in
(\Z^n)^p$, $r\in \N_{0}$, $\phi$ be a real-valued function on
$\Z^n\times (\Z^n)^{p}$ and
$$
h(s,k,l):=\frac{b(l)\, P_j(k)\, e^{i\phi(k,l)}}{|k|^{s+r}|k+\wh
l_1|^2\cdots |k+\wh
l_p|^2} \, ,
$$
with $h(s,k,l):=0$ if, for $k\neq 0$, one of the
denominators is zero.

For all $s\in \C$ such that $\Re(s)>n+j-r-2p$, the series
$$
H(s):={{\sum}'}_{(k,l)\in (\Z^n)^{p+1}} h(s,k,l)
$$
is absolutely summable. In particular,
$$
{\sum_{k\in\Z^n}}' \sum_{l\in (\Z^n)^p} h(s,k,l) = \sum_{l\in
  (\Z^n)^p}
{\sum_{k\in\Z^n}} ' h(s,k,l)\,.
$$
\end{lemma}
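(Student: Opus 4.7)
My plan is to obtain a pointwise bound on $|h(s,k,l)|$ that factorizes the $k$ and $l$ dependencies, then conclude absolute summability by Tonelli plus elementary zeta-sum / Schwartz-sum estimates.

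First I would control the numerator and denominator separately. Since $P_j$ is homogeneous of degree $j$, there is a constant $C_j$ with $|P_j(k)| \leq C_j |k|^j$ for all $k \in \Z^n$. The phase $e^{i\phi(k,l)}$ has modulus $1$ and disappears from the estimate. For the denominator, I would invoke Lemma \ref{R-poly}, which produces a polynomial $P \in \R[X_1,\dots,X_p]$ with positive coefficients of degree $4p$ such that
$$
\frac{1}{|k+\hat l_1|^2 \cdots |k+\hat l_p|^2} \leq \frac{P(|l_1|,\dots,|l_p|)}{|k|^{2p}}
$$
whenever none of the factors vanish (the excluded $k$'s contribute nothing by the convention $h(s,k,l)=0$).

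Combining, for $\sigma:=\Re(s)$ and any nonzero $k$ and any $l$,
$$
|h(s,k,l)| \leq C_j \, \frac{|b(l)| \, P(|l_1|,\dots,|l_p|)}{|k|^{\sigma+r+2p-j}}.
$$
Set $\tilde b(l):=|b(l)|\,P(|l_1|,\dots,|l_p|)$; since $b \in \SS((\Z^n)^p)$ and $P$ is polynomial, $\tilde b \in \ell^1((\Z^n)^p)$ (in fact $\tilde b$ is still Schwartz). Therefore
$$
\sum_{(k,l)\in (\Z^n)^{p+1}, k\neq 0} |h(s,k,l)| \;\leq\; C_j\,\Bigl({\sum_{k\in\Z^n}}' \tfrac{1}{|k|^{\sigma+r+2p-j}}\Bigr)\,\Bigl(\sum_{l\in(\Z^n)^p} \tilde b(l)\Bigr).
$$
The $k$-series converges as soon as $\sigma+r+2p-j > n$, i.e. $\sigma > n+j-r-2p$, which is precisely the hypothesis; the $l$-series converges because $\tilde b$ is Schwartz. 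Hence $H(s)$ is absolutely summable in this half-plane.

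The commutation of summations then follows immediately from Fubini--Tonelli applied to the counting measure on $\Z^n \times (\Z^n)^p$. There is no real obstacle here: the only subtlety is making sure Lemma \ref{R-poly} indeed applies to every term appearing in the sum, which it does by construction of $h$ (terms with vanishing denominator are set to zero and thus contribute trivially to the bound).
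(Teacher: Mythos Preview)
Your proof is correct and follows essentially the same approach as the paper: both use Lemma~\ref{R-poly} to bound the denominator by $P(|l_1|,\dots,|l_p|)/|k|^{2p}$, then factor the resulting bound as a product of an $l$-sum (finite by the Schwartz property of $b$) and a $k$-sum (finite by the condition $\sigma>n+j-r-2p$). The only cosmetic difference is that you absorb $|P_j(k)|\leq C_j|k|^j$ into the exponent of $|k|$ immediately, whereas the paper keeps $|P_j(k)|$ in the $k$-factor $G(k)$; this changes nothing.
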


\begin{proof}
Let $s=\sigma+i\tau \in \C$ such that
$\sigma>n+j-r-2p$. By Lemma \ref{R-poly} we get, for $k\neq 0$,
$$
|h(s,k,l)|\leq |b(l)\, P_j(k)|\, |k|^{-r-\sigma-2p}\, P(l),
$$
where $P(l):=P(|l_1|,\cdots,|l_p|)$ and $P$ is a polynomial of degree
$4p$ with
positive coefficients. Thus,
$|h(s,k,l)|\leq  F(l)\, G(k)$
where $F(l):=|b(l)|\, P(l)$ and $G(k):= |P_j(k)| |k|^{-r-\sigma-2p}$.
The summability
of $\sum_{l\in (\Z^n)^p} F(l)$ is implied by the fact that $b\in
\mathcal{S}((\Z^n)^p)$. The summability of ${\sum}'_{k\in \Z^n} G(k)$
is a consequence
of the fact that $\sigma>n+j-r-2p$. Finally, as a product of two
summable series,
${\sum_{k,l}} F(l) G(k)$ is a summable series, which proves that
${\sum_{k,l}}h(s,k,l)$ is also absolutely summable.
\end{proof}

\begin{definition}
Let $f$ be a function on $D\times (\Z^n)^p$ where $D$ is an open
neighborhood of $0$ in $\C$.

We say that
$f$ satisfies (H1) if and only if there exists $\rho>0$ such that

\hspace{1cm} (i) for any $l$, $s\mapsto f(s,l)$ extends as a
holomorphic function on $U_\rho$,
where $U_\rho$ is the
open disk of center 0 and radius $\rho$,

\hspace{1cm}(ii) the series $\sum_{l\in (\Z^n)^p}
\norm{H(\cdot,l)}_{\infty,\rho}$
is a summable series, where $\norm{H(\cdot,l)}_{\infty,\rho}:=\sup_{s\in
U_{\rho}}|H(s,l)|$.\\
 We say that
$f$ satisfies (H2) if and only if there exists $\rho>0$ such that

\hspace{1cm}
(i) for any $l$, $s\mapsto f(s,l)$ extends as a holomorphic function
on
$U_\rho-\{0\}$,

\hspace{1cm}
(ii) for any $\delta$ such that $0<\delta<\rho$, the series
$\sum_{l\in (\Z^n)^p}
\norm{H(\cdot,l)}_{\infty,\delta,\rho}$ is summable, where
$\norm{H(\cdot,l)}_{\infty,\delta,\rho}:=\sup_{\delta<|s|<\rho}|H(s,l)|$.
\end{definition}

\begin{remark}
Note that (H1) implies (H2). Moreover, if $f$ satisfies (H1)
(resp. (H2) for $\rho>0$, then it is straightforward to check that
$f:s\mapsto \sum_{l\in (\Z^n)^p} f(s,l)$
extends as an holomorphic function on $U_\rho$ (resp. on $U_\rho
\setminus \set{0}$).
\end{remark}

\begin{corollary}\label{res-somH} With the same notations of Lemma
\ref{abs-som},
suppose that $r+2p-j>n$, then, the function
$H(s,l):={\sum}'_{k\in \Z^n} h(s,k,l)$ satisfies (H1).
\end{corollary}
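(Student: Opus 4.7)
The plan is to exploit the hypothesis $r+2p-j>n$, which says that $s=0$ lies strictly inside the half-plane of absolute convergence $\Re(s)>n+j-r-2p$ supplied by Lemma \ref{abs-som}. Hence there is room to shrink a small disk $U_\rho$ around $0$ and obtain uniform bounds on $h(s,k,l)$ that are still summable in $(k,l)$.

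Concretely, I would fix $\rho$ with $0<\rho<r+2p-j-n$, and let $s\in U_\rho$, so $\Re(s)>-\rho$. For every $l$ and every $k\in\Z^n\setminus\{0\}$ for which $k+\hat l_i\neq 0$ for all $i$ (the remaining terms are zero by convention), Lemma \ref{R-poly} gives
\[
\frac{1}{|k+\hat l_1|^2\cdots|k+\hat l_p|^2}\ \leq\ \frac{P(|l_1|,\dots,|l_p|)}{|k|^{2p}},
\]
with $P$ a polynomial with nonnegative coefficients. Combining this with the homogeneity estimate $|P_j(k)|\ll |k|^{j}$ and $|k|\geq 1$ yields, uniformly for $s\in U_\rho$,
\[
|h(s,k,l)|\ \leq\ |b(l)|\,P(|l_1|,\dots,|l_p|)\,C_j\,|k|^{j+\rho-r-2p}.
\]
By the choice of $\rho$, $j+\rho-r-2p<-n$, so $M_\rho := {\sum}'_{k\in\Z^n}|k|^{j+\rho-r-2p}<\infty$, and hence
\[
\|H(\cdot,l)\|_{\infty,\rho}\ \leq\ C_j\,M_\rho\,|b(l)|\,P(|l_1|,\dots,|l_p|).
\]
Since $b\in\mathcal S((\Z^n)^p)$ and $P$ has polynomial growth, the product $|b(l)|\,P(|l_1|,\dots,|l_p|)$ is summable over $l\in(\Z^n)^p$, which gives condition (ii) in the definition of (H1).

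For condition (i), fix $l$; each term $k\mapsto h(s,k,l)$ is holomorphic in $s$ (dependence through $|k|^{-s-r}$), and the bound above shows that the partial sums over $k$ converge uniformly on $U_\rho$. By Weierstrass, $s\mapsto H(s,l)$ is holomorphic on $U_\rho$. This is the whole argument; no delicate point is involved, since the hypothesis $r+2p-j>n$ is exactly what is needed to absorb a small shift in $\Re(s)$ without leaving the region of absolute convergence.
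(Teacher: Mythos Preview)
Your proof is correct and follows essentially the same approach as the paper: fix $\rho$ with $0<\rho<r+2p-j-n$, use Lemma \ref{R-poly} together with $||k|^{-s}|\le |k|^\rho$ on $U_\rho$ to obtain a bound of the form $|h(s,k,l)|\le |b(l)|\,P(|l_1|,\dots,|l_p|)\cdot G(k)$ with $G$ summable, then conclude both holomorphy (via uniform convergence) and summability of $\|H(\cdot,l)\|_{\infty,\rho}$. The only cosmetic difference is that you absorb $|P_j(k)|\le C_j|k|^j$ into the $k$-sum explicitly, whereas the paper keeps $|P_j(k)|$ inside the constant $K=\sum'_k |P_j(k)|\,|k|^{-r+\rho-2p}$; the two are equivalent.
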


\begin{proof} $(i)$ Let's fix $\rho>0$ such that $\rho < r+2p-j-n$.
Since $r+2p-j>n$, $U_\rho$ is inside the half-plane of absolute
convergence of
the series defined by $H(s,l)$. Thus, $s\mapsto H(s,l)$ is
holomorphic on $U_\rho$.\\
$(ii)$ Since $\big||k|^{-s}\big|\leq |k|^{\rho}$ for all $s\in
U_\rho$ and
$k\in\Z^n \setminus \set{0}$, we get as in the above proof
$$
|h(s,k,l)|\leq |b(l)\, P_j(k)| \, |k|^{-r+\rho-2p} \,
P(|l_1|,\cdots,|l_p|).
$$
Since $\rho < r+2p-j-n$, the series ${\sum}'_{k\in \Z^n} |P_j(k)|
|k|^{-r+\rho-2p}$ is
summable.

Thus, $\norm{H(\cdot,l)}_{\infty,\rho} \leq K \, F(l)$ where $K :=
{{\sum_k}}'|P_j(k)| |k|^{-r+\rho-2p}<\infty$. We have already seen
that the series
$\sum_l F(l)$ is summable, so we get the result.
\end{proof}

We note that if $f$ and $g$ both satisfy (H1) (or (H2)),
then so does $f+g$. In the
following, we will use the equivalence relation
$$
f\sim g \Longleftrightarrow f-g \text{ satisfies (H1)}.
$$

\begin{lemma}\label{res-som}
Let $f$ and $g$ be two functions on $D\times (\Z^n)^p$ where $D$ is
an open
neighborhood of $0$ in $\C$, such that $f\sim g$ and such that $g$
satisfies (H2). Then
$$
\underset{s=0}{\Res}\sum_{l \in (\Z^n)^p} f(s,l)=\sum_{l \in (\Z^n)^p}
\underset{s=0}{\Res}\ g(s,l)\, .
$$
\end{lemma}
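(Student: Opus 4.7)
The plan is to decompose $\sum_l f(s,l)$ into a holomorphic part (given by $f-g$) and a meromorphic part (given by $g$), and then justify the interchange of sum and residue for the second one via a contour integral argument.

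First, I would pick a single radius $\rho>0$ small enough that both hypotheses (H1) for $f-g$ and (H2) for $g$ hold simultaneously (i.e., take the minimum of the two radii appearing in the definitions). By (H1) applied to $f-g$, the series $\sum_l (f-g)(s,l)$ converges normally on $U_\rho$, hence its sum extends to a holomorphic function on $U_\rho$ whose residue at $s=0$ vanishes. Consequently
\[
\underset{s=0}{\Res}\sum_{l} f(s,l) \;=\; \underset{s=0}{\Res}\sum_{l} g(s,l),
\]
and it remains to show that the right-hand side equals $\sum_l \Res_{s=0}\, g(s,l)$.

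For this, I would use the contour integral representation of the residue. Fix $0<\delta_1<\delta<\delta_2<\rho$. By (H2), each $g(\cdot,l)$ is holomorphic on $U_\rho\setminus\{0\}$, and for any $s$ on the circle $\{|s|=\delta\}$ one has the uniform bound $|g(s,l)|\le \|g(\cdot,l)\|_{\infty,\delta_1,\rho}$, whose sum over $l$ converges by (H2). Therefore $\sum_l g(s,l)$ converges normally on $\{|s|=\delta\}$, which justifies the interchange
\[
\underset{s=0}{\Res}\sum_{l} g(s,l)
\;=\;\frac{1}{2\pi i}\oint_{|s|=\delta}\sum_{l} g(s,l)\,ds
\;=\;\sum_{l}\frac{1}{2\pi i}\oint_{|s|=\delta} g(s,l)\,ds
\;=\;\sum_{l}\underset{s=0}{\Res}\, g(s,l).
\]
Combining the two identities yields the claim.

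There is essentially no hard step here; the only care needed is in the bookkeeping of the radii (shrinking $\rho$ so that (H1) and (H2) hold together, and choosing an intermediate circle $|s|=\delta$ that lies strictly inside the annulus $\delta_1<|s|<\rho$ so the (H2) supremum dominates $|g|$ on it uniformly in $l$). Once the domains are aligned, normal convergence trivially legitimizes both the analytic extension of $\sum_l(f-g)$ and the Fubini-type exchange between $\sum_l$ and $\oint$.
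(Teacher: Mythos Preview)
Your proof is correct and follows essentially the same approach as the paper: both arguments use the contour-integral representation of the residue together with the uniform summability from (H2) to swap $\sum_l$ and $\oint$, and both use (H1) for $f-g$ to kill the holomorphic discrepancy. The only cosmetic difference is that the paper first observes $f$ itself satisfies (H2) (as a sum of an (H1) function and an (H2) function) and then runs the contour argument on $f$ rather than on $g$, whereas you split explicitly into $\sum_l(f-g)$ and $\sum_l g$; the content is identical.
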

\begin{proof} Since $f\sim g$, $f$ satisfies (H2) for a certain
$\rho>0$.
Let's fix $\eta$ such that $0<\eta<\rho$ and define $C_\eta$ as the
circle of center 0
and radius $\eta$. We have
$$
\underset{s=0}{\Res}\ g(s,l) = \underset{s=0}{\Res}\ f(s,l) = \tfrac
{1}{2\pi
i}\oint_{C_\eta} f(s,l)\, ds = \int_I u(t,l) dt\, .
$$
where $I=[0,2\pi]$ and $u(t,l):=\tfrac {1}{2\pi} \eta e^{it}
f(\eta\,e^{i t},l) $. The
fact that $f$ satisfies (H2) entails that the series $\sum_{l\in
(\Z^n)^p}
\norm{f(\cdot,l)}_{\infty,C_\eta}$ is summable. Thus, since
$\norm{u(\cdot,l)}_{\infty} = \tfrac {1}{2\pi} \eta
\norm{f(\cdot,l)}_{\infty,C_\eta}$, the series $\sum_{l\in (\Z^n)^p}
\norm{u(\cdot,l)}_{\infty}$ is summable, so, as a consequence,
$\int_I \sum_{l\in
(\Z^n)^p}   u(t,l) dt = \sum_{l\in (\Z^n)^p}  \int_I u(t,l)dt$ which
gives the result.
\end{proof}

\subsection{Computation of residues of zeta functions}

Since, we will have to compute residues of series, let us introduce
the following

\begin{definition}
\begin{align*}
\zeta(s)&:=\sum_{n=1}^{\infty} n^{-s},\\
Z_{n}(s)&:={\sum_{k\in\Z^{n}}}'\,\,\vert k\vert^{-s},\\
\zeta_{p_{1},\dots,p_{n}}(s)&:={\sum_{k\in\Z^{n}}}'\,\,
\frac{k_1^{p_{1}}\cdots
k_n^{p_{n}}}{\vert k\vert^{s}}\,\text{ , for } p_{i}\in \N,
\end{align*}
\end{definition}
\noindent where $\zeta(s)$ is the Riemann zeta function (see
\cite{HW} or
\cite{Edery}).

By the symmetry $k\rightarrow -k$, it is clear that these functions
$\zeta_{p_{1},\dots,p_{n}}$ all vanish for odd values of $p_{i}$.

Let us now compute
$\zeta_{0,\cdots,0,1_{i},0\cdots,0,1_{j},0\cdots,0}(s)$ in terms of
$Z_{n}(s)$:\\
Since $\zeta_{0,\cdots,0,1_{i},0\cdots,0,1_{j},0\cdots,0}(s)
=A_{i}(s)\,\delta_{ij}$,
exchanging the components $k_{i}$ and $k_{j}$, we get
\begin{align*}
\zeta_{0,\cdots,0,1_{i},0\cdots,0,1_{j},0\cdots,0}(s)
=\tfrac{\delta_{ij}}{n}\,Z_{n}(s-2).
\end{align*}
Similarly,
\begin{align*}
{\sum}'_{\Z^{n}}\,\tfrac{k_{1}^{2}k_{2}^{2}}{\vert k\vert^{s+8}}
=\tfrac{1}{n(n-1)}
Z_{n}(s+4)- \tfrac{1}{n-1}{\sum}'_{\Z^{n}}\, \tfrac{k_{1}^4}{\vert
k\vert^{s+8}}
\end{align*}
but it is difficult to write explicitly
$\zeta_{p_{1},\dots,p_{n}}(s)$ in terms of
$Z_{n}(s-4)$ and other $Z_{n}(s-m)$ when at least four indices
$p_{i}$ are non zero.

When all $p_{i}$ are even, $\zeta_{p_{1},\dots,p_{n}}(s)$ is a
nonzero series of
fractions $\tfrac{P(k)}{\vert k\vert ^s}$ where $P$ is a homogeneous
polynomial of
degree $p_{1}+\cdots +p_{n}$. Theorem \ref{res-int} now gives us the
following

\begin{prop}
    \label{calculres}
$\zeta_{p_{1},\dots,p_{n}}$ has a meromorphic extension to the whole
plane with a
unique pole at $n+p_{1} +\cdots +p_{n}$. This pole is simple and the
residue at this
pole is
\begin{align}
    \label{formule1}
\underset{s=n+p_{1} +\cdots
+p_{n}}{\Res} \,\zeta_{p_{1},\dots,p_{n}}(s)= 2 \,
\tfrac{\Gamma(\tfrac{p_{1}+1}{2}) \cdots
\Gamma(\tfrac{p_{n}+1}{2})}
{\Gamma
(\tfrac{n+p_{1}+ \cdots + p_{n}}{2})}
\end{align}
when all $p_{i}$ are even or this residue is zero otherwise.\\
In particular, for $n=2$,
\begin{align}
\underset{s=0}{\Res} \,\,{\sum_{k\in\Z^2}}'\,\tfrac{k_{i}k_{j}} {\vert
k\vert^{s+4}}=\delta_{ij}\,\pi\, , \label{formulen=2}
\end{align}
and for $n=4$,
\begin{align}
&\underset{s=0}{\Res} \,\,{\sum_{k\in\Z^{4}}}'\,\tfrac{k_{i}k_{j}}
{\vert k\vert^{s+6}}=\delta_{ij}\tfrac{\pi^2}{2}\, ,\nonumber\\
&\label{formule2} \underset{s=0}{\Res}
\,{\sum_{k\in\Z^{4}}}'\,\tfrac{k_{i}k_{j}k_{l}k_{m}}
{\vert k\vert^{s+8}}=(\delta_{ij}\delta_{lm}+\delta_{il}\delta_{jm}
+\delta_{im}\delta_{jl})\,\tfrac{\pi^2}{12}\, .
\end{align}
\end{prop}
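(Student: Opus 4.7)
The plan is to apply Theorem \ref{res-int} directly to $P(k) = k_1^{p_1}\cdots k_n^{p_n}$, which is homogeneous of degree $d := p_1 + \cdots + p_n$. Since there is only one homogeneous component, Theorem \ref{res-int} immediately gives that $\zeta_{p_1,\ldots,p_n}(s)$ has meromorphic continuation to $\C$ with at most a simple pole at $s = n + d$, with residue
$$\underset{s=n+d}{\Res}\,\zeta_{p_1,\ldots,p_n}(s) = \int_{S^{n-1}} u_1^{p_1}\cdots u_n^{p_n}\, dS(u).$$

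The remaining task is to evaluate this spherical integral. If any $p_i$ is odd, the antipodal symmetry $u_i \mapsto -u_i$ of $S^{n-1}$ (which preserves $dS$) changes the sign of the integrand, so the integral vanishes. When all $p_i$ are even, I would compute the integral via the standard Gaussian trick. Namely, by polar coordinates,
\begin{align*}
\int_{\R^n} x_1^{p_1}\cdots x_n^{p_n}\, e^{-|x|^2}\, dx
&= \left(\int_{S^{n-1}} u_1^{p_1}\cdots u_n^{p_n}\, dS(u)\right) \int_0^\infty r^{n+d-1}\, e^{-r^2}\, dr \\
&= \tfrac{1}{2}\,\Gamma\bigl(\tfrac{n+d}{2}\bigr) \int_{S^{n-1}} u_1^{p_1}\cdots u_n^{p_n}\, dS(u),
\end{align*}
while by Fubini the left-hand side factorizes as
$$\prod_{i=1}^n \int_\R x_i^{p_i}\, e^{-x_i^2}\, dx_i = \prod_{i=1}^n \Gamma\bigl(\tfrac{p_i+1}{2}\bigr).$$
Solving for the spherical integral yields the claimed formula
$$\int_{S^{n-1}} u_1^{p_1}\cdots u_n^{p_n}\, dS(u) = \frac{2\,\Gamma(\tfrac{p_1+1}{2})\cdots \Gamma(\tfrac{p_n+1}{2})}{\Gamma(\tfrac{n+d}{2})}.$$

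The special cases \eqref{formulen=2} and \eqref{formule2} for $n=2$ and $n=4$ then follow by substitution, using $\Gamma(1/2)=\sqrt{\pi}$, $\Gamma(3/2)=\sqrt{\pi}/2$, $\Gamma(5/2) = 3\sqrt{\pi}/4$, and the fact that $\zeta_{p_1,\ldots,p_n}$ vanishes whenever some $p_i$ is odd (which accounts for the symmetrized Kronecker combinations in \eqref{formule2}). No step is genuinely hard here: the heavy lifting has been done in Theorem \ref{res-int}, and the only real content is the classical evaluation of the moment integral on $S^{n-1}$, which is a routine application of Gaussian integration.
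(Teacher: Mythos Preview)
Your proof is correct and follows the same route as the paper: apply Theorem~\ref{res-int} to reduce the residue to the spherical moment integral, then evaluate that integral. The paper simply cites Schwartz's textbook for the spherical integral formula, whereas you spell out the standard Gaussian-integral derivation; the special cases are handled identically.
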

\begin{proof}
Equation (\ref{formule1}) follows from Theorem (\ref{res-int})
$$
\underset{s=n+p_{1} +\cdots
+p_{n}}{\Res}\, \, \zeta_{p_{1},\dots,p_{n}}(s)=
\int_{k \in
S^{n-1}}k_{1}^{p_{1}} \cdots k_{n}^{p_{n}}\, dS(k)
$$
and standard formulae (see for instance \cite[VIII,1;22]{Schwartz}).
Equation
(\ref{formulen=2}) is a straightforward consequence of Equation
(\ref{formule1}).
Equation (\ref{formule2}) can be checked for the cases $i=j\neq l=m$
and
$i=j=l=m$.
\end{proof}
 Note that $Z_n(s)$ is an Epstein zeta function associated to the
quadratic form $q(x):=x_1^2+...+x_n^2$, so $Z_n$ satisfies the
following functional equation
$$
Z_n(s)= \pi^{s-n/2} \Gamma (n/2 -s/2)\Gamma (s/2)^{-1}\,
Z_n(n-s).
$$
Since $\pi^{s-n/2} \Gamma (n/2 -s/2) \,\Gamma (s/2)^{-1}=0$
for any negative even integer $n$ and $Z_n(s)$ is meromorphic on $\C$
with only one pole at $s=n$ with residue $2 \pi^{n/2} \Gamma
(n/2)^{-1}$ according to previous proposition, so we get $Z_n(0)=
-1$. We have proved that
\begin{align}
\label{formule}
    \underset{s=0}{\Res} \,\, Z_{n}(s+n)&=2\pi^{n/2} \, \Gamma
(n/2)^{-1},\\
Z_n(0)&= -1.
\label{Zn0}
\end{align}

\subsection{Meromorphic continuation of a class of zeta functions}

Let $n,q\in \N$, $q\geq2$, and $p=(p_1,\dots,p_{q-1}) \in
\N_0^{q-1}$.\\
Set $I:=\{ i \mid p_i\neq 0\}$ and assume that $I\neq \emptyset$ and
$${\cal I}:=\{\alpha =(\alpha_i)_{i\in I} \mid \forall i\in I ~
\alpha_i=(\alpha_{i,1},\dots, \alpha_{i,p_i})\in
\N_0^{p_i}\}=\prod_{i\in I} \N_0^{p_i}.$$

We will use in the sequel also the following notations:

\hspace{1cm} - for $x=(x_1,\dots,x_t) \in \R^t$ recall that
$|x|_1=|x_1|+\dots+|x_t|$ and $|x|=\sqrt{x_1^2+\dots+x_t^2}$;

\hspace{1cm} - for all $\alpha =(\alpha_i)_{i\in I}
  \in {\cal I} =\prod_{i\in I} \N_0^{p_i}$,
$$|\alpha|_1=\sum_{i\in I} |\alpha_i|_1 =\sum_{i\in I}
\sum_{j=1}^{p_i} |\alpha_{i,j}| {\mbox { and }}
\genfrac(){0pt}{1}{1/2}{\alpha} =\prod_{i\in I}
\genfrac(){0pt}{1}{1/2}{\alpha_{i}}=
\prod_{i\in I} \prod_{j=1}^{p_i}
\genfrac(){0pt}{1}{1/2}{\alpha_{i,j}}.$$

\subsubsection{A family of polynomials}
In this paragraph we define a family of polynomials which plays an
important role later.

Consider first the variables:

- for $X_1,\dots, X_n$ we set $X=(X_1,\dots,X_n)$;

- for any $i=1,\dots,2q$, we consider the variables $Y_{i,1},\dots,
  Y_{i,n}$ and set
$Y_i:=(Y_{i,1},\dots, Y_{i,n})$ and $Y:=(Y_1,\dots,Y_{2q})$;

- for $Y=(Y_1,\dots,Y_{2q})$, we set for any $1\leq j\leq q$,
$\wt Y_j:= Y_1+\cdots+ Y_j+  Y_{q+1}+\cdots + Y_{q+j}$.

We define for all $\alpha =(\alpha_i)_{i\in I}\in {\cal I}
=\prod_{i\in I} \N_0^{p_i}$ the polynomial
\begin{equation}
\label{palphaxy}
P_\alpha(X,Y):= \prod_{i\in I} \prod_{j=1}^{p_i}
(2\langle X, \wt Y_i\rangle + |\wt
    Y_i|^2)^{\alpha_{i,j}}.
\end{equation}

It is clear that $P_\alpha(X,Y) \in \Z[X,Y]$, deg$_X P_\alpha \leq
|\alpha|_1$ and deg$_Y P_\alpha \leq 2 |\alpha|_1$.

Let us fix a polynomial $Q\in \R[X_1,\cdots,X_n]$ and
note $d:= \deg Q$.
For $\a\in  {\cal I}$, we want to expand $P_\alpha(X,Y) \, Q(X)$ in
homogeneous polynomials in $X$ and $Y$ so defining
$$
L(\a):=\set{\beta\in\N_0^{(2q+1)n}\, :\,
|\beta|_1-d_\beta \leq 2|\a|_1 \text{ and } d_\beta\leq |\a|_1+d}
$$
where
$d_\beta := \sum_1^n \beta_i$, we set
$$
\genfrac(){0pt}{1}{1/2}{\alpha} P_\alpha(X,Y) \, Q(X) =:
\sum_{\beta\in L(\a)} c_{\a,\beta} \,  X^\beta Y^\beta
$$
where $c_{\a,\beta}\in \R$, $X^\beta:= X_1^{\beta_{1}} \cdots
X_n^{\beta_{n}}$ and $Y^{\beta}:=
Y_{1,1}^{\beta_{n+1}}\cdots Y_{2q,n}^{\beta_{(2q+1)n}}$. By
definition, $X^\beta$ is a
homogeneous polynomial of degree in $X$ equals to $d_\beta$.
We note $$M_{\a,\beta}(Y):=c_{\a,\beta} \, Y^\beta.$$

\subsubsection{Residues of a class of zeta functions}

In this section we will prove the following result, used in
Proposition \ref{zeta(0)} for
the computation of the spectrum dimension of the noncommutative
torus:

\begin{theorem}
\label{zetageneral}
(i) Let $\tfrac{1}{2\pi}\Th$ be a
diophantine matrix, and $\wt a \in \mathcal{S}
\big((\Z^{n})^{2q}\big)$. Then
$$
s\mapsto f(s):= \sum_{l\in [(\Z^n)^{q}]^2} \wt a_{l}\ {\sum_{k\in
\Z^n}}'\, \prod_{i=1}^{q-1}|k+\wt l_i|^{p_i} |k|^{-s}\, Q(k)\,
e^{ik.\Th \sum_1^{q} l_j}
$$
has a meromorphic continuation to the whole complex
 plane $\C$ with at most simple
possible poles at the points $s=n+d+|p|_1-m$ where $m\in \N_0$.

(ii) Let $m\in \N_0$ and set
$I(m):= \set{(\a,\beta)\in \mathcal{I}\times \N_0^{(2q+1)n} \, :
\, \beta\in L(\a) \text{ and } m=2|\a|_1 -d_\beta +d } $.
Then $I(m)$ is a finite set and $s=n+d+|p|_1-m$ is a pole of $f$
if and only if
$$
C(f,m):= \sum_{l\in Z} \wt a_l
\sum_{(\a,\beta)\in I(m)} M_{\a,\beta}(l)  \int_{u\in S^{n-1}}
u^\beta\, dS(u) \neq 0,
$$
with $Z:=\{l \,: \, \sum_1^{q} l_j=0 \}$ and the convention
$\sum_{\emptyset} =0$.
In that case $s=n+d+|p|_1-m$ is a simple pole of residue
$\underset{s= n+d+|p|_1 -m}{\Res} \, f(s) = C (f,m)$.
\end{theorem}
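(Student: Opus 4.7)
The natural approach is to reduce $f$ to a superposition of zeta functions of the type handled by Theorem~\ref{analytic} via an explicit Taylor expansion of each factor $|k+\wt l_i|^{p_i}$ in powers of $|k|^{-2}$. Concretely, I would write each factor as $|k+\wt l_i|=|k|\bigl(1+u_i/|k|^2\bigr)^{1/2}$ with $u_i:=2\langle k,\wt l_i\rangle+|\wt l_i|^2$, and apply the binomial series $(1+x)^{1/2}=\sum_a\binom{1/2}{a}x^a$ independently to each of the $p_i$ factors. Collecting the resulting multi-indices $\alpha=(\alpha_i)_{i\in I}\in\mathcal I$ produces exactly the polynomial $P_\alpha$ of~\eqref{palphaxy}:
\begin{equation*}
\prod_{i=1}^{q-1}|k+\wt l_i|^{p_i}=|k|^{|p|_1}\sum_{\alpha\in\mathcal I}\binom{1/2}{\alpha}\frac{P_\alpha(k,l)}{|k|^{2|\alpha|_1}}.
\end{equation*}

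For a fixed $N\in\N$, I would truncate the expansion at $|\alpha|_1\leq N$ and control the Taylor remainder by a Lagrange-type estimate, uniform in $l$, giving a decay in $k$ like $|k|^{d+|p|_1-2(N+1)}$ up to a polynomial factor in the components of $l$. The Schwartz decay of $\wt a_l$ absorbs the polynomial growth in $l$, and arguments modelled on Lemma~\ref{abs-som} and Corollary~\ref{res-somH} show that the remainder defines a holomorphic function on the half-plane $\Re(s)>n+d+|p|_1-2(N+1)$. The finite truncated piece, after expanding $\binom{1/2}{\alpha}P_\alpha(k,l)Q(k)=\sum_{\beta\in L(\alpha)}c_{\alpha,\beta}\,k^{\beta^X}l^{\beta^Y}$ into monomials (with $\beta^X,\beta^Y$ denoting the $X$- and $Y$-parts of~$\beta$), is a finite linear combination of series of the shape
\begin{equation*}
\sum_{l\in(\Z^n)^{2q}}\wt a_l\,l^{\beta^Y}\,{\sum_{k\in\Z^n}}'\,\frac{k^{\beta^X}}{|k|^{s-|p|_1+2|\alpha|_1}}\,e^{ik.\Theta\sum_{j=1}^{q}l_j},
\end{equation*}
each of which falls under Theorem~\ref{analytic}(ii) with Schwartz coefficient $l\mapsto\wt a_l\,l^{\beta^Y}$ and homogeneous polynomial $k^{\beta^X}$ of degree $d_\beta$.

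Theorem~\ref{analytic}(ii) yields meromorphic continuation to $\C$ with a single simple pole at $s-|p|_1+2|\alpha|_1=n+d_\beta$, i.e.\ at $s=n+d+|p|_1-m$ with $m:=2|\alpha|_1-d_\beta+d\in\N_0$, and residue $\bigl(\sum_{l\in Z}\wt a_l\,l^{\beta^Y}\bigr)\int_{S^{n-1}}u^{\beta^X}\,dS(u)$. Letting $N\to\infty$ proves (i): $f$ is meromorphic on $\C$ with simple poles contained in $\{n+d+|p|_1-m:m\in\N_0\}$. For (ii), the constraints $d_\beta\leq|\alpha|_1+d$ and $m=2|\alpha|_1-d_\beta+d$ force $|\alpha|_1\leq m$, hence $d_\beta\leq 2m$ and $|\beta|_1\leq d_\beta+2|\alpha|_1\leq 4m+d$, so that $I(m)$ is finite. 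Summing the contributions of the finitely many $(\alpha,\beta)\in I(m)$ and recalling $M_{\alpha,\beta}(l)=c_{\alpha,\beta}\,l^{\beta^Y}$, the residue at $s=n+d+|p|_1-m$ equals $C(f,m)$.

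The main obstacle will be the quantitative control of the Taylor remainder uniformly in $l$: one needs the bound to depend only polynomially on $|l_i|$ (so that the Schwartz decay of $\wt a$ can absorb it) while producing absolute summability over $k\in\Z^n$ on a half-plane widening with $N$. This in turn requires that the effective estimates of Lemma~\ref{ieffective} be applied term by term with constants independent of $l$, which is precisely the place where the diophantine hypothesis on $\tfrac{1}{2\pi}\Theta$ enters, via Theorem~\ref{analytic}(ii) and~(iii).
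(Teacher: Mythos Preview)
Your approach is correct and essentially identical to the paper's: binomial expansion of $\prod_i|k+\wt l_i|^{p_i}$ producing the polynomials $P_\alpha$, truncation in $\alpha$, and term-by-term application of Theorem~\ref{analytic}(ii) to the resulting monomial series. The paper resolves your ``main obstacle'' concretely via Lemma~\ref{zetageneral-lem}: it restricts the expansion to $|k|>U(l):=36\bigl(\sum_{i\neq q}|l_i|\bigr)^4$, where the remainder is $\mathcal O_N(|k|^{|p|_1-(N+1)/2})$ uniformly, and absorbs the finite piece $|k|\le U(l)$ into an entire function bounded by $1+|l|^{A(N)}$---exactly the polynomial-in-$l$ control you anticipated. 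One small correction: the diophantine hypothesis is \emph{not} used in the remainder estimate (which is pure size control, handled by the Schwartz decay of $\wt a$); it enters only when invoking Theorem~\ref{analytic}(ii) on each truncated term $h_{\alpha,\beta}$.
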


In order to prove the theorem above we need the following

\begin{lemma}
\label{zetageneral-lem}
For all $N\in \N$ we have
$$
 \prod_{i=1}^{q-1} |k+\wt l_i|^{p_i}=
\sum_{\alpha =(\alpha_i)_{i\in I}
  \in \prod_{i\in I}\{0,\dots,N\}^{p_i}}
  \genfrac(){0pt}{1}{1/2}{\alpha}\,
\tfrac{P_\alpha (k,l)}{|k|^{2|\alpha|_1-|p|_1}}
+\mathcal{O}_N(|k|^{|p|_1- (N+1)/2})
$$
uniformly in $k\in \Z^n$ and $l\in
(\Z^n)^{2q}$ verifying $|k| > U(l):=36\,
(\sum_{i=1,\, i\neq q}^{2q-1}|l_{i}|)^4$.
\end{lemma}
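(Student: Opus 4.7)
The plan is to expand each factor $|k+\wt l_i|^{p_i}$ by writing $|k+\wt l_i|^{p_i}=|k|^{p_i}(1+u_i)^{p_i/2}$ with $u_i:=(2\langle k,\wt l_i\rangle+|\wt l_i|^2)/|k|^2$, and then to use the key algebraic identity $(1+u_i)^{p_i/2}=\prod_{j=1}^{p_i}(1+u_i)^{1/2}$, applying a one–dimensional Taylor truncation factor by factor. The product–of–square-roots form is exactly what produces the multi-index $\alpha_i=(\alpha_{i,1},\ldots,\alpha_{i,p_i})$ and the coefficient $\binom{1/2}{\alpha}=\prod_{i,j}\binom{1/2}{\alpha_{i,j}}$ that appear on the right-hand side.

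First I would verify that the hypothesis $|k|>U(l)=36 L^4$, with $L:=\sum_{i\neq q,\,i\le 2q-1}|l_i|$, is strong enough to put $u_i$ in the convergence regime of the binomial series. Since $|\wt l_i|\le L$, the Cauchy--Schwarz estimate gives $|u_i|\le 2L/|k|+L^2/|k|^2\le 3L/|k|$, and $|k|>36 L^4$ forces $|u_i|\le 1/12$, so that the Taylor formula $(1+x)^{1/2}=\sum_{a=0}^{N}\binom{1/2}{a}x^{a}+R_N(x)$ with $|R_N(x)|\le c_N|x|^{N+1}$ is available for each factor $(1+u_i)^{1/2}$.

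Next I would carry out the two nested product expansions. Plugging the Taylor truncation into each of the $p_i$ copies of $(1+u_i)^{1/2}$ and expanding the product yields a ``main'' term
\[
\sum_{\alpha_i\in\{0,\ldots,N\}^{p_i}}\binom{1/2}{\alpha_i}\,u_i^{|\alpha_i|_1},
\]
plus crossed terms in which at least one factor contributes its $R_N(u_i)$; all such crossed terms are $\mathcal O_N(|u_i|^{N+1})$ since $|u_i|\le 1/12$. Taking the product over $i\in I$ and applying the same bookkeeping, the main contribution becomes
\[
\sum_{\alpha}\binom{1/2}{\alpha}\prod_{i\in I}u_i^{|\alpha_i|_1}=\sum_{\alpha}\binom{1/2}{\alpha}\,\frac{P_\alpha(k,l)}{|k|^{2|\alpha|_1}},
\]
by the very definition of $P_\alpha$, while all remaining terms are $\mathcal O_N(\max_i|u_i|^{N+1})$. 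Multiplying by the prefactor $|k|^{|p|_1}$ reproduces the announced main sum.

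It remains to absorb $L$ into $|k|$ to get the stated uniform remainder. The crossed-terms bound yields a total error of size $\mathcal O_N\bigl(|k|^{|p|_1}(3L/|k|)^{N+1}\bigr)=\mathcal O_N\bigl(L^{N+1}|k|^{|p|_1-(N+1)}\bigr)$. The hypothesis $|k|>36L^4$ is now used again in the reverse direction: it gives $L\le (|k|/36)^{1/4}$, hence $L^{N+1}\le C_N|k|^{(N+1)/4}\le C_N|k|^{(N+1)/2}$ (all taken uniformly over the allowed $(k,l)$), and the remainder becomes $\mathcal O_N(|k|^{|p|_1-(N+1)/2})$ as required. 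The only mild obstacle is the combinatorics of the two nested product expansions and making sure all crossed terms are bounded uniformly by $|u_i|^{N+1}$; this is routine once $|u_i|\le 1/12$ is in hand, so no genuinely delicate estimate is needed.
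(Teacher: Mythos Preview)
Your argument is correct and follows essentially the same route as the paper's proof: write $|k+\wt l_i|^{p_i}=|k|^{p_i}\prod_{j=1}^{p_i}(1+u_i)^{1/2}$, Taylor-truncate each square-root factor at order $N$, identify the main term with $\sum_\alpha\binom{1/2}{\alpha}P_\alpha(k,l)/|k|^{2|\alpha|_1}$, and bound the crossed terms. The only cosmetic difference is in the use of the hypothesis $|k|>36L^4$: the paper invokes it at the very start to get $|u_i|<(2\sqrt{|k|})^{-1}$, so that each single-factor remainder already decays like $|k|^{-(N+1)/2}$ and no $L$ ever appears in the error; you instead carry the weaker bound $|u_i|\le 3L/|k|$ through the products and only at the end trade $L^{N+1}$ for $|k|^{(N+1)/4}$ via $L\le(|k|/36)^{1/4}$.
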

\begin{proof}
For $i=1,\dots,q-1$, we have uniformly in $k\in \Z^n$ and $l\in
(\Z^n)^{2q}$ verifying $|k| > U(l)$,
\begin{equation}
\label{devjustification}
\tfrac{\big|2\langle k, \wt l_i \rangle+|\wt l_i|^2\big|}{|k|^2}
\leq\tfrac{\sqrt{U(l)}}{2|k|} < \tfrac{1}{2\sqrt{|k|}}.
\end{equation}
In that case,
\begin{eqnarray*}
|k+\wt l_i|&=& \big(|k|^2+2\langle k, \wt l_i\rangle
  + |\wt l_i|^2\big)^{1/2} =
|k| \big(1+ \tfrac{2\langle k, \wt l_i\rangle
  + |\wt l_i|^2}{|k|^2}\big)^{1/2} =
 \sum_{u=0}^{\infty}  \genfrac(){0pt}{1}{1/2}{u} \,
\tfrac{1}{|k|^{2u-1}}P^i_u(k,l)
\end{eqnarray*}
where for all $i=1,\dots, q-1$ and for all $u\in \N_0$,
\begin{equation*}
P^i_u(k,l):=\big(2\langle k, \wt l_i\rangle + |\wt l_i|^2\big)^u,
\end{equation*}
with the convention $P^i_0(k,l):=1$.

In particular $P^i_u(k,l)\in \Z[k,l]$,
$\deg_{k} P^i_u\leq u$ and $\deg_{l} P^i_u\leq 2u$.
Inequality (\ref{devjustification}) implies that for all
$i=1,\dots,q-1$
and for all $u\in \N$,
$$
\tfrac{1}{|k|^{2u}}\,|P^i_u(k,l)|\leq \big(2\sqrt{|k|}\big)^{-u}
$$
uniformly in $k\in \Z^n$ and $l\in
(\Z^n)^{2q}$ verifying $|k| > U(l)$.

Let $N\in \N$. We deduce from the previous that for any
$k\in \Z^n$ and $l\in
(\Z^n)^{2q}$ verifying $|k| > U(l)$ and
for all $i=1,\dots,q-1$, we have
\begin{eqnarray*}
|k+\wt l_i|&=& \sum_{u=0}^{N}
\genfrac(){0pt}{1}{1/2}{u} \, \tfrac{1}{|k|^{2u-1}}P^i_u(k,l)+
\mathcal{O}\big(\sum_{u>N}|k|\,|\genfrac(){0pt}{1}{1/2}{u}|\,
(2\sqrt{|k|})^{-u}\big)\\
&=& \sum_{u=0}^{N} \genfrac(){0pt}{1}{1/2}{u} \,
\tfrac{1}{|k|^{2u-1}}P^i_u(k,l)+\mathcal{O}_N
\big(\tfrac{1}{|k|^{(N-1)/2}}\big).
\end{eqnarray*}
It follows that for any $N\in \N$, we have uniformly in
$k\in \Z^n$ and $l\in (\Z^n)^{2q}$ verifying $|k| > U(l)$ and
for all $i\in I$,
$$
|k+\wt l_i|^{p_i}=\sum_{\alpha_i \in
  \{0,\dots,N\}^{p_i}} \genfrac(){0pt}{1}{1/2}{\alpha_i} \,
\tfrac{1}{|k|^{2|\alpha_i|_1-p_i}}P^i_{\alpha_i} (k,l)
+\mathcal{O}_N \left(\tfrac{1}{|k|^{(N+1)/2-p_i}}\right)
$$
where $P^i_{\alpha_i} (k,l)=\prod_{j=1}^{p_i} P^i_{\alpha_{i,j}}(k,l)$
for all $\alpha_i =(\alpha_{i,1},\dots,\alpha_{i,p_i})\in
  \{0,\dots,N\}^{p_i}$ and
$$
\prod_{i\in I} |k+\wt l_i|^{p_i}=\sum_{\alpha=(\alpha_i) \in
  \prod_{i\in I} \{0,\dots,N\}^{p_i}} \genfrac(){0pt}{1}{1/2}{\alpha}
\,
\tfrac{1}{|k|^{2|\alpha|_1-|p|_1}}P_{\alpha} (k,l)+\mathcal{O}_N
\big(\tfrac{1}{|k|^{(N+1)/2 -|p|_1}}\big)
$$
where $P_{\alpha } (k,l)=\prod_{i\in I} P^i_{\alpha_{i}}(k,l)=
\prod_{i\in I} \prod_{j=1}^{p_i} P^i_{\alpha_{i,j}}(k,l)$.
\end{proof}

\medskip

{\it Proof of Theorem \ref{zetageneral}.}

$(i)$ All $n$, $q$, $p=(p_1,\dots,p_{q-1})$ and $\wt a
\in {\cal S}\left((\Z^n)^{2q}\right)$ are fixed as above and we
define formally for any $l \in (\Z^n)^{2q}$
\begin{equation}
\label{fls}
F(l,s):= {\sum_{k\in \Z^n}}' \,
\prod_{i=1}^{q-1} |k+\wt l_i|^{p_i}\, Q(k)\,
e^{ik.\Th \sum_1^{q}l_j}\,|k|^{-s}.
\end{equation}
Thus, still formally,
\begin{equation}\label{fsexpfls}
f(s):=\sum_{l\in (\Z^n)^{2q}} \wt a_l\ F(l,s).
\end{equation}
It is clear that $F(l,s)$ converges absolutely in the half
plane $\{\sigma=\Re(s) >n+d+|p|_1\}$ where $d=\deg Q$.

Let $N\in \N$. Lemma \ref{zetageneral-lem} implies
that for any $l\in (\Z^n)^{2q}$ and for $s\in \C$ such
that $\sigma >n+|p|_1+d$,
\begin{align*}
F(l,s)&= {\sum_{|k|\leq U(l)}}' \,
\prod_{i=1}^{q-1} |k+\wt l_i|^{p_i}\, Q(k)\,
e^{ik.\Th \sum_1^{q}l_j}\,|k|^{-s} \\
& \quad \quad  +\sum_{\alpha =(\alpha_i)_{i\in I}
  \in \prod_{i\in
I}\{0,\dots,N\}^{p_i}}\genfrac(){0pt}{1}{1/2}{\alpha}
\sum_{|k| > U(l)} \tfrac{1}{|k|^{s+2|\alpha|_1-|p|_1}}P_\alpha (k,l)
Q(k)\, e^{ik.\Th \sum_1^{q}l_j}
+ G_N(l,s).
\end{align*}
where $s\mapsto G_N(l,s)$ is a holomorphic function in the half-plane
$D_N:=\{\sigma > n+d+|p|_1-\tfrac{N+1}{2}\}$ and verifies in it the
bound
$G_N(l,s) \ll_{N,\sigma} 1$ uniformly in $l$.

It follows that
\begin{equation}
\label{flsexpzeta}
F(l,s)= \sum_{\alpha =(\alpha_i)_{i\in I}
  \in \prod_{i\in I}\{0,\dots,N\}^{p_i}} H_{\a}(l,s)+ R_N(l,s),
\end{equation}
where
\begin{eqnarray*}
H_{\a}(l,s)&:=&{\sum_{k\in \Z^n}}'\,
\genfrac(){0pt}{1}{1/2}{\alpha} \,
\tfrac {1}{|k|^{s+2|\alpha|_1-|p|_1}}P_\alpha (k,l)\, Q(k)\,
e^{ik.\Th \sum_1^{q}l_j},\\
R_N(l,s)&:=& {\sum_{|k|\leq U(l)}}' \,
\prod_{i=1}^{q-1} |k+\wt l_i|^{p_i}\, Q(k)\,
e^{ik.\Th \sum_1^{q}l_j}\,|k|^{-s}\\
& & \quad -{\sum_{|k|\leq U(l)}}' \quad\sum_{\alpha =
(\alpha_i)_{i\in I} \in \prod_{i\in I}\{0,\dots,N\}^{p_i}}
  \genfrac(){0pt}{1}{1/2}{\alpha} \tfrac{P_\alpha (k,l)}
{|k|^{s+2|\alpha|_1-|p|_1}}Q(k)\,
  e^{ik.\Th \sum_1^{q}l_j}
+ G_N(l,s).
\end{eqnarray*}
In particular there exists $A(N)>0$ such that
$s\mapsto R_N(l,s)$ extends holomorphically to the half-plane
$D_N$ and verifies in it the bound
$R_N(l,s) \ll_{N,\sigma} 1 +|l|^{A(N)}$ uniformly in $l$.

Let us note formally
$$
h_\a(s):= \sum_l \wt a_l\, H_\a(l,s).
$$
Equation (\ref{flsexpzeta}) and $R_N(l,s) \ll_{N,\sigma} 1
+|l|^{A(N)}$ imply that
\begin{equation}
\label{fssimN}
f(s) \sim_N \sum_{\alpha =(\alpha_i)_{i\in I}
  \in \prod_{i\in I}\{0,\dots,N\}^{p_i}} h_\a(s),
\end{equation}
where $\sim_N$ means modulo a holomorphic function in $D_N$.

Recall the decomposition
$\genfrac(){0pt}{1}{1/2}{\alpha} \,P_\a(k,l) \, Q(k)=
\sum_{\beta\in L(\a)} M_{\a,\beta}(l) \, k^\beta$ and we
decompose similarly
$h_{\a}(s) =\sum_{\beta\in L(\a)} h_{\a,\beta}(s).$
Theorem \ref{analytic} now implies that for all
$\alpha =(\alpha_i)_{i\in I}  \in \prod_{i\in I}\{0,\dots,N\}^{p_i}$
and $\beta\in L(\a)$,

\quad - the map $s\mapsto h_{\a,\beta}(s)$ has a meromorphic
continuation to the whole complex plane $\C$ with only one
simple possible pole at $s=n+ |p|_1 - 2|\a|_1 +d_\beta$,

\quad - the residue at this point is equal to
\begin{equation}
\label{res-halphaj}
\underset {s=n+ |p|_1 - 2|\a|_1 +d_\beta}{\Res}\,
h_{\a,\beta}(s) =
\sum_{l\in \mathcal{Z}} \wt a_l\, M_{\a,\beta}(l) \int_{u\in S^{n-1}}
u^\beta dS(u)
\end{equation}
where $\mathcal{Z}:=\{l\in (\Z)^{n})^{2q} \, : \, \sum_1^{q} l_j =0
\}$.
If the right hand side is zero, $h_{\a,\beta}(s)$ is holomorphic on
$\C$.

By (\ref{fssimN}), we deduce therefore that
$f(s)$ has a meromorphic continuation on the halfplane $D_N$,
with only simple possible poles in the set
$
\set{n+|p|_1 + k  \,: \,-2N|p|_1\leq k \leq d}.
$
Taking now $N\to \infty$ yields the result.

$(ii)$ Let $m\in \N_0$ and set
$I(m):= \set{(\a,\beta)\in \mathcal{I}\times \N_0^{(2q+1)n} \,:
\, \beta\in L(\a) \text{ and } m=2|\a|_1 -d_\beta +d } $.
If $(\a,\beta)\in I(m)$, then $|\a|_1 \leq m$ and
$|\beta|_1\leq 3m+d$, so $I(m)$ is finite.

With a chosen $N$ such that $2N|p|_1+d>m$, we get by (\ref{fssimN})
and (\ref{res-halphaj})
$$
\underset{s= n+d+|p|_1 -m}{\Res} \,f(s) =
\sum_{l\in \mathcal{Z}} \wt a_l
\sum_{(\a,\beta)\in I(m)} M_{\a,\beta}(l)  \int_{u\in S^{n-1}}
u^\beta\, dS(u)=C(f,m)
$$
with the convention $\sum_{\emptyset} =0$. Thus, $n+d+|p|_1 - m$ is a
pole of $f$ if and only if
$C(f,m)\neq 0$.
{\qed}

\chapter{Spectral action on $\SUq$}

\section{Introduction}

The quantum group $SU_{q}(2)$ has already a rather long history
of studies \cite{Klimyk} being one of the finest examples of quantum
deformation. This includes an approach via the noncommutative notion
of spectral triple introduced by Connes \cite{Book,ConnesMarcolli}
and various notions of Dirac operators were introduced in
\cite{Bibikov, Goswani, Pal1, Cindex, Pal3}. Finally, a real
spectral triple, which was exhibited in \cite{DLSSV}, is invariant
by left and right action of $\U_{q}(su(2))$ and satisfies almost
all postulated axioms of triples except the commutant and first-order
properties. These, however, remain valid only up to infinitesimal
of arbitrary high order. The last presentation generalizes in
a straightforward way all geometric construction details of the
spinorial spectral triple for the classical three-sphere. In
particular, both the equivariant representation and the symmetries
have a $q \to 1$ proper classical limit.

The goal of this chapter is to obtain the spectral action on $SU_q(2)$ which is a spectral triple with an invertible Dirac operator, with the control of the differential calculus generated by the Dirac
operator arising as the main difficulty. This issue of computing the spectral action was addressed in the
epilogue of \cite{Walter}.
In the case of $SU_q(2)$, we have $Sd^+=Sd=\set{1,2,3}$, so
\begin{align}
    \SS(\DD_{\Abb},\Phi,\Lambda) \, = \,\sum_{1\leq k \leq 3}
\Phi_{k}\,
    \Lambda^{k} \ncint \vert D_{\Abb}\vert^{-k} + \Phi(0) \,
    \zeta_{D_{\Abb}}(0). \label{actionsuq}
\end{align}

Note that in the case of $SU_q(2)$ there are no terms in $\Lambda^{-k}$, $k>0$ because the
dimension spectrum is bounded below by 1. 

To proceed with the computation of \eqref{actionsuq}, we introduce two presentations of one-forms.
The main ingredient is $F=$ sign ($\DD$) which appears to be a
one-form up to $OP^{-\infty}$.

In section 2, we discuss the spectral action of an arbitrary
3-dimensional spectral triple using cocycles.

In sections 3 and 4 we recall the main results on $SU_q(2)$ of
\cite{DLSSV} and show that the full spectral action with reality
operator given by \eqref{formuleaction} is completely determined by
the terms
$$
\ncint A^{q} |\DD|^{-p}, \quad 1\leq q\leq p\leq 3\,.
$$
where $A$ is a linear combination of terms of the form $a[|\DD|,b]$ with $a,b\in \A$.

In section 5, we establish a differential calculus up to some ideal
in pseudodifferential operators and apply these results to the
precise computation of previous noncommutative integrals.

Section 6 is devoted to explicit examples, while in next section are
given different comparisons with the commutative case of the 3-sphere
corresponding to $SU(2)$.

\section{Spectral action in $3$-dimension}

\subsection{Tadpole and cocycles}

Let $(\A,\H, \DD)$ be a spectral triple of dimension 3. We refer to \cite{CC1,ConnesMarcolli} for the definition of the $OP^\a$ spaces and the algebra of pseudodifferential operators $\Psi(\A)$ on a spectral triple.

For
$n\in\N^*$ and
$a_{i} \in \A$, define
\begin{align*}
\phi_n(a_0, \cdots,a_n) & :=\ncint a_0 [\DD, a_1]
D^{-1}\cdots[\DD, a_n] D^{-1}.
\end{align*}
We also use notational integrals on the universal $n$-forms
$\Omega_u^n(\A)$
defined by
$$
\int_{\phi_{n}}a_{0}da_{1}\cdots
da_{n}:=\phi_{n}(a_{0},a_{1},\cdots,a_{n}).
$$
and the reordering fact that
$(da_{0})a_{1}=d(a_{0}a_{1})-a_{0}da_{1}$.

We use the $b-B$ bicomplex defined in \cite{Book}: $b$ is the
Hochschild coboundary map (and $b'$ is truncated one) defined on
$n$-cochains $\phi$ by
\begin{align*}
b\phi(a_{0},\ldots,a_{n+1})&:=b'\phi(a_{0},\ldots,a_{n+1})+(-1)^{n+1}
\phi(a_{n+1}a_{0},a_{1},\ldots,a_{n}), \\
 b'\phi(a_{0},\ldots,a_{n+1})&:=\sum_{j=0}^n (-1)^{j}
\phi(a_{0},\ldots,a_{j}a_{j+1},\ldots,a_{n+1}).
\end{align*}
Recall that $B_{0}$ is defined on the normalized cochains $\phi_{n}$
by
$$
B_{0}
\phi_{n}(a_{0},a_{1},\ldots,a_{n-1}):=\phi_{n}(1,a_{0},
\ldots,a_{n-1}), \text{ thus }
\int_{\phi_{n}}d\omega=\int_{B_{0}\phi_{n}}\omega \text{ for }
\omega \in \Omega_u^{n-1}(\A).
$$
Then $B:=NB_{0}$, where $N:=1+\lambda+\ldots \lambda^n$ is the cyclic
skewsymmetrizer on the $n$-cochains and $\lambda$ is the cyclic
permutation $\lambda  \phi(a_{0},\ldots,a_{n}):=(-1)^n
\phi(a_{n},a_{0},\ldots,a_{n-1})$.

We will also encounter the cyclic 1-cochain $N\phi_{1}$:
\begin{align*}
N\phi_{1}(a_0,a_{1}) := \phi_1(a_0,a_{1}) - \phi_1(a_{1},a_0) \text{
and }
\int_{N\phi_{1}}a_{0}da_{1}:=N\phi_{1}(a_{0},a_{1}).
\end{align*}

\begin{remark}
    \label{commut}
Assume the integrand of $\ncint$ is in $OP^{-3}$. Since
$[D^{-1},a]=-D^{-1}[D,a]D^{-1}\in OP^{-2}$, this commutator
introduces an integrand in $OP^{-4}$ so has a vanishing integral:
under the integral, we can commute $D^{-1}$ with all $a\in \A$, but not with one-forms. Note also that since $P_0 \in OP^{-\infty}$, any integrand containing $P_0$ has a vanishing integral.
\end{remark}

\begin{lemma}
\label{cocycle}
We have
\begin{align*}
&\hspace{-5cm}(i) &&  \hspace{-5cm}b \phi_1 = - \phi_2.\\
&\hspace{-5cm}(ii) && \hspace{-5cm}b \phi_2 =0.\\
&\hspace{-5cm}(iii) && \hspace{-5cm}b \phi_3  = 0.\\
&\hspace{-5cm}(iv) && \hspace{-5cm}B \phi_1 =0.\\
&\hspace{-5cm}(v)  && \hspace{-5cm}B_0 \phi_2 =- (1-\lambda) \phi_1.\\
&\hspace{-5cm}(vi) && \hspace{-5cm}bB_{0}\phi_{2}=2
\phi_{2}+B_{0}\phi_{3}.\\
&\hspace{-5cm}(vii) && \hspace{-5cm}B \phi_{2}=0.\\
&\hspace{-5cm}(viii) && \hspace{-5cm}B_{0} \phi_{3}=Nb' \phi_{1}.\\
&\hspace{-5cm}(ix) &&  \hspace{-5cm}B \phi_{3}=3B_{0} \phi_{3}.
\end{align*}
\end{lemma}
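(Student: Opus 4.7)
The plan is to establish each of the nine identities by direct algebraic computation, relying on three basic tools throughout: the Leibniz rule $[\DD, ab] = [\DD,a]b + a[\DD,b]$, the commutator identity $D^{-1}[\DD, a] D^{-1} \equiv -[D^{-1}, a] \pmod{OP^{-\infty}}$ (which follows from $[D^{-1}, a] = -D^{-1}[D, a]D^{-1}$ combined with $P_0 \in OP^{-\infty}$), and the trace property of $\ncint$ on $\Psi(\A)$. I also repeatedly use $\DD D^{-1} \equiv D^{-1}\DD \equiv 1 \pmod{OP^{-\infty}}$, which allows me to replace such factors by $1$ under $\ncint$, together with Remark \ref{commut} to commute $D^{-1}$ past algebra elements inside the integral.

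For $(i)$, expanding $b\phi_1(a_0, a_1, a_2)$ via Leibniz produces two pairs of terms, one of which cancels directly; the surviving pair regroups as $\ncint a_0 [\DD, a_1][D^{-1}, a_2]$, which equals $-\phi_2(a_0, a_1, a_2)$ by the commutator identity. The same strategy, applied with more bookkeeping, handles $(ii)$ and $(iii)$: after Leibniz expansion and trace-induced cyclic reshuffling, the surviving terms group into nested $D^{-1}$-commutators $D^{-1}(da_i)D^{-1} \equiv -[D^{-1}, a_i]$ that cancel in pairs. Identity $(iv)$ is the vanishing of a tadpole: $B_0\phi_1(a_0) = \ncint [\DD, a_0]D^{-1}$ equals $\ncint \DD a_0 D^{-1} - \ncint a_0 \DD D^{-1} \equiv \ncint a_0 - \ncint a_0 = 0$ by the trace property and $\DD D^{-1} \equiv 1$.

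For $(v)$, I rewrite the middle factor of $B_0\phi_2(a_0, a_1) = \ncint [\DD, a_0] D^{-1}[\DD, a_1]D^{-1}$ using $D^{-1}[\DD, a_1]D^{-1} \equiv -[D^{-1}, a_1]$, expand, and use the trace to obtain $-\phi_1(a_0, a_1) - \phi_1(a_1, a_0) = -(1-\lambda)\phi_1(a_0, a_1)$. For $(vii)$, symmetrizing $B_0\phi_2$ under the three cyclic permutations and using the trace to relate them produces pairwise cancellations summing to zero, so that $B\phi_2 = NB_0\phi_2 = 0$. Identity $(ix)$ is immediate: cycling the last two factors of $[\DD, a_0]D^{-1}[\DD, a_1]D^{-1}[\DD, a_2]D^{-1}$ to the front under $\ncint$ shows that $B_0\phi_3$ is cyclic, so $\lambda B_0\phi_3 = B_0\phi_3$ and $B\phi_3 = NB_0\phi_3 = 3B_0\phi_3$.

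The key remaining identities $(vi)$ and $(viii)$ are essentially equivalent: applying $b$ to the formula in $(v)$ gives $bB_0\phi_2 = -b\phi_1 + b\lambda\phi_1 = \phi_2 + b\lambda\phi_1$, so $(vi)$ reduces to $b\lambda\phi_1 = \phi_2 + B_0\phi_3$, while $(viii)$ asserts $B_0\phi_3 = Nb'\phi_1$ directly. Both require expanding $B_0\phi_3(a_0, a_1, a_2) = \ncint (da_0)D^{-1}(da_1)D^{-1}(da_2)D^{-1}$ by iterated use of $D^{-1}(da_i)D^{-1} \equiv -[D^{-1}, a_i]$ and Leibniz; after systematic use of the trace property and elimination of $\DD D^{-1}$-factors, one obtains exactly $-\ncint a_0 [\DD, a_1] a_2 D^{-1} - \ncint a_2 [\DD, a_0] a_1 D^{-1} - \ncint a_1 [\DD, a_2] a_0 D^{-1}$, which matches the direct expansion of $Nb'\phi_1$ as a signed sum of six $\phi_1$ terms in which the three ``diagonal'' contributions of the form $\ncint a_i a_j [\DD, a_k] D^{-1}$ cancel in pairs. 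The main obstacle is exactly this triple-Leibniz bookkeeping for $(vi)$/$(viii)$: each of the many terms produced must be tracked through signs and cyclic reorderings before the cancellations become visible; the other identities are straightforward variants of the same mechanism.
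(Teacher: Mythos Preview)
Your approach is correct and follows essentially the same route as the paper: direct expansion using Leibniz, the commutator identity $[D^{-1},a]=-D^{-1}[\DD,a]D^{-1}$ modulo $OP^{-\infty}$, and the trace property of $\ncint$. Your handling of $(i)$--$(v)$, $(viii)$, and the reduction of $(vi)$ to $b\lambda\phi_1=\phi_2+B_0\phi_3$ matches the paper's computations almost line for line; your final form for $B_0\phi_3$ as $-\sum_{\text{cyc}}\ncint a_0[\DD,a_1]a_2 D^{-1}$ is a slightly cleaner intermediate stopping point than the paper's, which expands all the way to monomials in $\DD$ and $D^{-1}$.

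Two small points. For $(vii)$, $B_0\phi_2$ is a $1$-cochain, so $N=1+\lambda$ has two terms, not three; the paper dispatches this in one line by invoking $(v)$ and the algebraic identity $N(1-\lambda)=0$, which is cleaner than an explicit symmetrization. For $(ix)$, your direct argument that $B_0\phi_3$ is cyclic under the trace is perfectly valid; the paper instead deduces it from $(viii)$ via $N^2=3N$ on $2$-cochains.
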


\begin{proof}
$(i)$
\begin{align*}
b \phi_1( a_0,a_1,a_2) &= \ncint a_0 a_1 [\DD,a_2] D^{-1}
- \ncint a_0 \left(a_1 [\DD,a_2] + [\DD,a_1] a_2 \right) D^{-1}
+ \ncint a_2 a_0 [\DD,a_1] D^{-1} \\
&= \ncint a_0 [\DD,a_1] \left( D^{-1} a_2 - a_2  D^{-1} \right)
= - \ncint a_0 [\DD,a_1] D^{-1} [\DD, a_2] D^{-1} \\
&= - \phi_2(a_0, a_1, a_2)
\end{align*}
where we have used the trace property of the noncommutative integral.

$(ii)$  $b \phi_2( a_0,a_1,a_2,a_3)$
\begin{align*}
\qquad&= \ncint a_0 a_1 [\DD,a_2] D^{-1} [\DD,
a_3] D^{-1}
- \ncint a_0 \left(a_1 [\DD,a_2] + [\DD,a_1] a_2 \right) D^{-1}
[\DD, a_3] D^{-1} \\
&\quad + \ncint a_0 [\DD,a_1] D^{-1} (a_2 [\DD,a_3] + [\DD,a_2] a_3
) D^{-1}
- \ncint a_3 a_0 [\DD,a_1] D^{-1} [\DD, a_2] D^{-1} \\
&= \ncint a_0 [\DD,a_1]
\left( D^{-1} a_2 - a_2 D^{-1} \right) [\DD, a_3]  D^{-1}
+ \ncint a_0 [\DD,a_1] D^{-1} [\DD,a_2]
\left( a_3  D^{-1} - D^{-1} a_3 \right) \\
&= - \ncint a_0 [\DD,a_1] D^{-1} [\DD, a_2] D^{-1} [D, a_3]
D^{-1} + \ncint a_0 [\DD,a_1] D^{-1} [\DD, a_2] D^{-1} [D, a_3]
D^{-1} \\
&=  0.
\end{align*}

$(iii)$ Using Remark \ref{commut}, we get
$\phi_{3}(a_{0},a_{1},a_{2},a_{3})=\ncint
a_{0}[\DD,a_{1}][\DD,a_{2}][\DD,a_{3}]\vert \DD \vert^{-3}$,
so similar computations as for $\phi_{2}$ gives $b\phi_{3}=0$.

$(iv)$ \quad $B_0 \phi_1( a_0 ) = \ncint  [\DD,a_0] D^{-1} =
\ncint \left( \DD a_0 D^{-1} -a_0 \right) = 0.$
\begin{align*}
\hspace{-1.8cm} (v) \quad B_0 \phi_2( a_0,a_1) =& \ncint  [\DD,a_0]
D^{-1} [\DD,a_1] D^{-1}
= \ncint a_0 D^{-1} [\DD,a_1] - \ncint a_0 [\DD,a_1] D^{-1} \\
=& \ncint a_0 a_1 - \ncint a_0 D^{-1} a_1 \DD - \ncint a_0
[\DD,a_1] D^{-1}\\
=& - \ncint a_1 [ \DD, a_0] D^{-1} - \ncint a_0 [\DD,a_1] D^{-1}
= -  \phi_1(a_1,a_0) - \phi_1(a_0, a_1) .
\end{align*}

$(vi)$ Since $-b \lambda
\phi_{1}(a_{0},a_{1},a_{2})=\phi_{1}(a_{2},a_{0}a_{1})-
\phi_{1}(a_{1}a_{2},a_{0})+\phi_{1}(a_{1},a_{2}a_{0})$, one obtains
that
$$
-b \lambda  \phi_{1}(a_{0},a_{1},a_{2})=\ncint
a_{0}a_{1}D^{-1}a_{2}\DD+a_{0}D^{-1}a_{1}\DD a_{2}
-a_{0}D^{-1}a_{1}a_{2}\DD-a_{0}a_{1}a_{2}.
$$
So by direct expansion,
this is equal to $-\ncint a_{0}D^{-1}
[\DD,a_{1}]D^{-1}[\DD,a_{2}]$ which means that
\begin{align*}
-b\lambda \phi_{1}(a_{0},a_{1},a_{2})&=\ncint [D^{-1},a_{0}]
[\DD,a_{1}]D^{-1}[\DD,a_{2}]
-a_{0}[\DD,a_{1}]D^{-1}[\DD,a_{2}]D^{-1}\\
&=-B_{0}\phi_{3}(a_{0},a_{1},a_{2})
-\phi_{2}(a_{0},a_{1},a_{2}).
\end{align*}
Now the result follows from $(i),(v)$.

$(vii)$ $B \phi_{2}=NB_{0} \phi_{2}=-N(1-\lambda) \phi_{1}=0$ since
$N(1-\lambda)=0$.

\begin{align*}
\hspace{-0.6cm} (viii)\quad B_0 \phi_3( a_0,a_1,a_2) =
& \ncint  [\DD,a_0] D^{-1} [\DD,a_1] D^{-1} [\DD,a_2] D^{-1} \\
=& \ncint a_0 D^{-1} [\DD,a_1] D^{-1} [\DD,a_2]
-  \ncint a_0 [\DD,a_1] D^{-1} [\DD,a_2] D^{-1} \\
=& \ncint a_0 a_1 D^{-1} [\DD,a_2]
-  \ncint a_0 D^{-1} a_1 [\DD,a_2]
-  \ncint a_0 [\DD,a_1] D^{-1} [\DD,a_2] D^{-1}\\
=& \ncint  a_0 a_1 a_2 - \ncint a_0 a_1 D^{-1} a_2 \DD
-  \ncint  a_0 D^{-1} a_1 \DD a_2 + \ncint a_0 D^{-1} a_1 a_2
\DD\\
&\quad -  \ncint  a_0 [\DD,a_1] D^{-1} [\DD,a_2] D^{-1}\\
=& \ncint  a_0 a_1 a_2 - a_{2}\DD a_{1}a_{0}D^{-1}+a_{1}a_{2}\DD
a_{0}D^{-1}+a_2 \DD a_0 a_1 D^{-1}\\
&\quad - \left( a_{0}\DD a_{1}a_{2}D^{-1}-a_{0}\DD
a_{1}D^{-1}-a_{0}a_{1}\DD a_{2}D^{-1}+a_{0}a_{1}a_{2}\right).
\end{align*}
Expanding $(id + \lambda+\lambda^2)b'\phi_{1}(a_{0},a_{1},a_{2})$, we
recover previous expression.

$(ix)$ Consequence of $(viii)$.
\end{proof}

\subsection{Scale-invariant term of the spectral action}

We know from \cite{CC1} that the scale-invariant term of the action
can be written as
\begin{align}
    \label{constanttterm}
\zeta_{D_{A}}(0)-\zeta_{D}(0) = - \ncint  A D^{-1} + \tfrac{1}{2}
\ncint  A D^{-1} A
D^{-1}
- \tfrac{1}{3} \ncint  A D^{-1}  A D^{-1}  A D^{-1}.
\end{align}
In fact, this action can be expressed in dimension 3 as
contributions corresponding to tadpoles and the Yang--Mills
and Chern--Simons actions in dimension 4:
\begin{prop}
    \label{prop:action}
For any one-form $A$,
\begin{align}
    \label{constanttterm1}
\zeta_{D_{A}}(0)-\zeta_{D}(0) =  - \tfrac{1}{2} \int_{N\phi_{1}}A
+ \tfrac{1}{2} \int_{\phi_2} (dA + A^2)
- \tfrac{1}{2} \int_{\phi_3}( A dA + \tfrac{2}{3} A^3).
\end{align}
\end{prop}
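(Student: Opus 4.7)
The plan is to expand $A=\sum_i a_i[\DD,b_i]$ as (the image of) a universal one-form $\sum_i a_i\,db_i$ and reduce each of the three terms on the right-hand side of \eqref{constanttterm} to a combination of the functionals $\int_{\phi_n}$ applied to universal forms built from $A$, plus the $\int_{N\phi_1}$ term. The key technical tool throughout is Remark \ref{commut}: whenever an integrand already sits in $OP^{-3}$, replacing $D^{-1}a$ by $aD^{-1}$ introduces $[D^{-1},a]\equiv -D^{-1}[\DD,a]D^{-1}\bmod OP^{-\infty}$, which is $OP^{-2}$ and drops the overall order to $OP^{-4}$, giving a vanishing residue. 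This lets one freely commute $D^{-1}$ past $\A$-elements whenever the correction is of total order $<-3$, and lets one control the non-vanishing corrections otherwise.

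For the linear term, Lemma \ref{cocycle}$(v)$ gives $\int_{\phi_2}dA=\sum_i B_0\phi_2(a_i,b_i)=-\sum_i[\phi_1(a_i,b_i)+\phi_1(b_i,a_i)]$, while by definition $\int_{N\phi_1}A=\sum_i[\phi_1(a_i,b_i)-\phi_1(b_i,a_i)]$. Adding $-\tfrac12$ of the second to $\tfrac12$ of the first cancels the $\phi_1(b_i,a_i)$ contributions and leaves $-\sum_i\phi_1(a_i,b_i)=-\ncint AD^{-1}$.

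For the quadratic term, inside $\ncint AD^{-1}AD^{-1}$ (whose integrand is only $OP^{-2}$) I would write $D^{-1}a_j=a_jD^{-1}-D^{-1}[\DD,a_j]D^{-1}\bmod OP^{-\infty}$. The correction is $OP^{-3}$ and contributes exactly $-\phi_3(a_i,b_i,a_j,b_j)=-\int_{\phi_3}AdA$. In the remaining main term, the graded Leibniz identity $[\DD,b_i]a_j=[\DD,b_ia_j]-b_i[\DD,a_j]$ produces $\phi_2(a_i,b_ia_j,b_j)-\phi_2(a_ib_i,a_j,b_j)$, which is precisely $\int_{\phi_2}A^2$ (obtained by expanding $A\cdot A$ as a universal 2-form via $db_i\cdot a_j=d(b_ia_j)-b_i\,da_j$). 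Hence $\ncint AD^{-1}AD^{-1}=\int_{\phi_2}A^2-\int_{\phi_3}AdA$.

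For the cubic term, the integrand of $\ncint AD^{-1}AD^{-1}AD^{-1}$ is $OP^{-3}$, so both commutations $D^{-1}a_j\to a_jD^{-1}$ and $D^{-1}a_k\to a_kD^{-1}$ are free modulo $OP^{-4}$. After bringing the expression to $\ncint a_i[\DD,b_i]a_j D^{-1}[\DD,b_j]a_k D^{-1}[\DD,b_k]D^{-1}$, two applications of the Leibniz identity on $[\DD,b_i]a_j$ and $[\DD,b_j]a_k$, followed by a further $D^{-1}b_j\equiv b_jD^{-1}$ move and one more Leibniz, reduce everything to the alternating sum $\phi_3(a_i,b_ia_j,b_ja_k,b_k)-\phi_3(a_i,b_ia_jb_j,a_k,b_k)-\phi_3(a_ib_i,a_j,b_ja_k,b_k)+\phi_3(a_ib_i,a_jb_j,a_k,b_k)$. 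On the other hand, expanding $A^3=(\sum a_idb_i)^3$ in normal form via repeated $db\cdot a=d(ba)-b\,da$ yields precisely this combination of $\phi_3$-terms, giving $\ncint AD^{-1}AD^{-1}AD^{-1}=\int_{\phi_3}A^3$. Assembling the three reductions gives \eqref{constanttterm1}. The delicate point is the cubic step, where the tracking of sign cancellations across the four Leibniz-generated pieces, and the matching with the normal form of $A^3$, is mechanical but intricate.
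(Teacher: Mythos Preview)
Your proposal is correct and follows essentially the same approach as the paper. The paper packages your three reductions into Lemma~\ref{cocycleandoneform} (ii), (iii), (iv), whose proofs use precisely the tools you describe --- Remark~\ref{commut} to commute $D^{-1}$ past elements of $\A$, Lemma~\ref{cocycle}(v) for the $B_0\phi_2$ identity, and Leibniz expansions --- and then substitutes into \eqref{constanttterm}; your argument is the same computation carried out in-line.
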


To prove this, we calculate now each terms of the action.
\begin{lemma}
      \label{cocycleandoneform}
For any one-form $A$, we have

(i) $\int_{\phi_{2}}dA=\int_{B_{0}\phi_{2}} A=-\int_{\phi_{1}}A
+\int_{\lambda\phi_{1}}A$.

(ii) $\ncint AD^{-1}=\int_{\phi_{1}} A=\tfrac 12
\int_{N\phi_{1}}A-\tfrac 12\int_{\phi_{2}}dA$.

(iii) $\ncint  A D^{-1} A D^{-1} = - \int_{\phi_3} A dA +
\int_{\phi_2} A^2.$

(iv) $\ncint  A D^{-1} A D^{-1} A D^{-1} = \int_{\phi_3} A^3.$
\end{lemma}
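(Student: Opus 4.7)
The four assertions are proved by chasing the definitions of $\phi_n$ and $\int_{\phi_n}$, invoking Lemma \ref{cocycle} and the key commutation principle of Remark \ref{commut}: since $\ncint$ vanishes on $OP^{-4}$ in dimension $3$, the operator $D^{-1}$ can be moved past any element of $\A$ under $\ncint$, because the correction term $-D^{-1}[\DD,a]D^{-1}$ drops the order by one. Throughout, write $A=a_0[\DD,a_1]$ (with summation suppressed).

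Parts (i) and (ii) are formal. For (i), apply the tautology $\int_{\phi_2}dA=\int_{B_0\phi_2}A$ and substitute $B_0\phi_2=-(1-\lambda)\phi_1$ from Lemma \ref{cocycle}(v). For (ii), the first equality $\ncint AD^{-1}=\int_{\phi_1}A$ is immediate from the definition of $\phi_1$. The second equality then follows from (i) together with the relation $\int_{N\phi_1}A=\int_{\phi_1}A+\int_{\lambda\phi_1}A$ (since $N=1+\lambda$ on $1$-cochains), by solving the resulting linear system in $\int_{\phi_1}A$.

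For (iii), use the exact algebraic identity $D^{-1}b_0=b_0D^{-1}-D^{-1}[\DD,b_0]D^{-1}$ to split $\ncint AD^{-1}BD^{-1}$ into two contributions; the one carrying the extra $D^{-1}$ is identified by inspection with $-\phi_3(a_0,a_1,b_0,b_1)=-\int_{\phi_3}A\,dB$. The remaining piece $\ncint a_0[\DD,a_1]b_0D^{-1}[\DD,b_1]D^{-1}$ is rewritten via the Leibniz identity $[\DD,a_1]b_0=[\DD,a_1b_0]-a_1[\DD,b_0]$, producing $\phi_2(a_0,a_1b_0,b_1)-\phi_2(a_0a_1,b_0,b_1)$. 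A parallel expansion in the universal calculus, $AB=a_0\,d(a_1b_0)\,db_1-a_0a_1\,db_0\,db_1$, shows this equals $\int_{\phi_2}AB$. Setting $B=A$ yields (iii).

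Part (iv) is the heart of the matter. Starting from $\ncint AD^{-1}AD^{-1}AD^{-1}$, commute each of the two interior $D^{-1}$'s past the subsequent $a_0'$ and $a_0''$ (valid modulo $OP^{-4}$ by Remark \ref{commut}), then apply the Leibniz identity to $[\DD,a_1]a_0'$ and $[\DD,a_1']a_0''$, obtaining four terms. Two of these contain $a_1'$ sandwiched between $D^{-1}$'s; commute $D^{-1}$ past $a_1'$ (again modulo $OP^{-4}$) and apply Leibniz once more. Among the eight resulting monomials, two pairs of the form $\pm a_0a_1a_0'[\DD,a_1']D^{-1}[\DD,a_0'']D^{-1}[\DD,a_1'']D^{-1}$ cross-cancel, leaving exactly
\begin{align*}
\phi_3(a_0,a_1a_0',a_1'a_0'',a_1'') &- \phi_3(a_0,a_1a_0'a_1',a_0'',a_1'') \\
 &\quad - \phi_3(a_0a_1,a_0',a_1'a_0'',a_1'') + \phi_3(a_0a_1,a_0'a_1',a_0'',a_1'').
\end{align*}
Performing the identical Leibniz manipulations on $A^3=a_0\,da_1\,a_0'\,da_1'\,a_0''\,da_1''\in\Omega_u^3(\A)$ (via $(df)g=d(fg)-f\,dg$), with the analogous cross-cancellation of two $\pm a_0a_1a_0'\,da_1'$ contributions, yields precisely the same four universal $3$-form monomials; applying $\phi_3$ then gives (iv). The main obstacle is organizational rather than conceptual: one must verify at each step that the commutator corrections really do lie in $OP^{-4}$, and must recognize the combinatorial cancellation that suppresses the potential $\phi_4$ contribution—consistent with the fact that no $\phi_4$ term can appear in dimension $3$.
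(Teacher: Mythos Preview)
Your proof is correct and follows essentially the same route as the paper: parts (i)--(ii) from Lemma~\ref{cocycle}(v), and parts (iii)--(iv) by repeatedly using Leibniz for $[\DD,\cdot]$ together with Remark~\ref{commut} to commute $D^{-1}$ past elements of~$\A$. The bookkeeping differs slightly---in (iii) you go directly via $[\DD,a_1]b_0=[\DD,a_1b_0]-a_1[\DD,b_0]$ where the paper first rewrites $D^{-1}[\DD,b_1]D^{-1}=b_1D^{-1}-D^{-1}b_1$ and then expands $[\DD,b_0]=\DD b_0-b_0\DD$; and in (iv) you expand both sides to four matching $\phi_3$ terms whereas the paper stops at an intermediate two-term form---but these are cosmetic differences, not different ideas.
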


\begin{proof}
$(i)$ and $(ii)$ follow directly from Lemma \ref{cocycle} $(v)$.

$(iii)$ With the shorthand $A=a_{i}db_{i}$ (summation on $i$)
\begin{align*}
\ncint  A D^{-1} A D^{-1} =& \ncint a_0 [\DD,b_0] D^{-1}
a_1 [\DD,b_1] D^{-1} \\
=&- \int_{\phi_3} A d A + \ncint a_0 [\DD, b_0] a_1 b_1 D^{-1}
- \ncint a_0 [\DD, b_0] a_1 D^{-1} b_1.
\end{align*}
We calculate further the remaining terms
\begin{align*}
\ncint a_0 [\DD, b_0] a_1 b_1 D^{-1} - \ncint a_0 [\DD, b_0] a_1
D^{-1} b_1
&= \ncint a_0 \DD b_0 a_1 b_1 D^{-1} - \ncint a_0 b_0 \DD a_1 b_1
D^{-1} \\
&\qquad- \ncint a_0 \DD b_0 a_1 D^{-1} b_1
+ \ncint a_0 b_0 \DD a_1 D^{-1} b_1,
\end{align*}
which are compared with
$\int_{\phi_2}A^2=\int_{\phi_{2}}a_{0}(db_{0})a_{1}db_{1}=
\int_{\phi_{2}}a_{0}d(b_{0}a_{1})db_{1}-a_{0}b_{0}da_{1}db_{1}$:
\begin{align*}
\int_{\phi_2} A^2 &= \ncint a_0 [\DD, b_0 a_1] D^{-1} [\DD, b_1]
D^{-1}
- \ncint a_0  b_0 [\DD, a_1] D^{-1} [\DD, b_1] D^{-1} \\
&= \ncint a_0 \DD b_0 a_1 b_1 D^{-1}
 - \ncint a_0 \DD b_0 a_1 D^{-1} b_1 - \ncint a_0 b_0 a_1 \DD b_1
D^{-1}
  + \ncint a_0 b_0 a_1 b_1 \\
&\qquad - \ncint a_0  b_0 \DD a_1 b_1 D^{-1}
  + \ncint a_0  b_0 \DD a_1 D^{-1} b_1 + \ncint a_0  b_0 a_1 \DD
b_1 D^{-1}
 - \ncint a_0  b_0 a_1 b_1 \\
&=  \ncint a_0 \DD b_0 a_1 b_1 D^{-1}  - \ncint  b_1 a_0 \DD b_0
a_1 D^{-1}
- \ncint a_0  b_0 \DD a_1 b_1 D^{-1} + \ncint  b_1 a_0  b_0 \DD a_1
D^{-1}.
\end{align*}

$(iv)$ Note that
\begin{align*}
\int_{\phi_{3}}A^3&=\int_{\phi_{3}}a_{0}(db_{0})a_{1}(db_{1})a_{2}db_{2}=
\int_{\phi_{3}}a_{0}d(b_{0}a_{1})d(b_{1}a_{2})db_{2}
-a_{0}b_{0}da_{1}d(b_{1}a_{2})db_{2}\\
&\qquad -a_{0}d(b_{0}a_{1}b_{1})d(a_{2}db_{2} +a_{0}b_{0}
d(a_{1}b_{1})da_{2}db_{2}\\
&=\ncint
a_{0}[\DD,b_{0}a_{1}]D^{-1}[\DD,b_{1}a_{2}]D^{-1}[\DD,b_{2}]D^{-1}
-a_{0}b_{0}[\DD,a_{1}]D^{-1}[\DD,b_{1}a_{2}]D^{-1}[\DD,b_{2}]D^{-1}\\
&\qquad-a_{0}[\DD,b_{0}a_{1}b_{1}]D^{-1}[\DD,a_{2}]D^{-1}[\DD,b_{2}]D^{-1}
+a_{0}b_{0}[\DD,a_{1}b_{1}]D^{-1}[\DD,a_{2}]D^{-1}
[\DD,b_{2}]D^{-1}.
\end{align*}
Summing up the first two terms and the last two ones gives
$$
\int_{\phi_{3}}A^3=\ncint a_{0}[\DD,b_{0}]a_{1}
D^{-1}[\DD,b_{1}a_{2}]D^{-1}[\DD,b_{2}]D^{-1}
-a_{0}[\DD,b_{0}]a_{1}b_{1}D^{-1}[\DD,a_{2}]D^{-1}[\DD,b_{2}]D^{-1}.
$$
Using Remark \ref{commut}, we can commute
under the integral $D^{-1}$ with all $a\in \A$ and similarly
$$
\ncint  A D^{-1} A D^{-1} A D^{-1} =\ncint
a_{0}[\DD,b_{0}]a_{1}D^{-1}[\DD,b_{1}]a_{2}D^{-1}[\DD,b_{2}]D^{-1}
$$
which proves $(iv)$.
\end{proof}

We deduce Proposition \ref{prop:action} from (\ref{constanttterm})
using the previous lemma.


\section{The $SU_{q}(2)$ triple}

\subsection{The spectral triple}

We briefly recall the main facts of the real spectral triple
$\big(\A(SU_q(2)),\H,\DD\big)$ introduced in \cite{DLSSV}, see also
\cite{Cindex,Pal1,Pal2}.
\vspace{0.5cm}

{\it The algebra:}

Let $\A := \A(SU_q(2))$ be
the $^*$-algebra generated polynomially by $a$ and~$b$,
subject to the following commutation rules with $0 < q < 1$:
\begin{eqnarray}
    \label{defrule}
&ba = q \,ab,  \qquad  b^*a = q\,ab^*, \qquad bb^* = b^*b,
&a^*a + q^2 \,b^*b = 1,  \qquad  aa^* + bb^* = 1\, .
\end{eqnarray}

We recall the following lemma from \cite[Lemma A2.1]{Woronowicz}:

\begin{lemma}
For any representation $\pi$ of $\A$, 
\begin{align*}
& Spect\big(\pi(bb^*)\big) = \set{0,q^{2k} \,\vert \,\vert k\in \N}
\text{ or } \pi(b)=0,\\
& Spect \big(\pi(aa^*)\big) =  \set{1,1-q^{2k} \,\vert \, k\in \N}
\text{ or $\pi(b)=0$ and $\pi(a)$ is a unitary}.
\end{align*}
\end{lemma}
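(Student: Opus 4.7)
The plan is to reduce the computation to a spectral analysis of $T:=\pi(bb^*)$, exploiting two ingredients: the commutation between $T$ and $\pi(a)$, and the explicit expressions for $\pi(aa^*)$ and $\pi(a^*a)$. First I would note that the relations $bb^*=b^*b$, $b^*a=qab^*$ and $ba=qab$ immediately give $bb^*\,a = q^2\,a\,bb^*$, hence $Ta=q^2 aT$ and, by taking adjoints, $Ta^*=q^{-2}a^*T$. Next, the last two defining relations yield $\pi(aa^*)=1-T$ and $\pi(a^*a)=1-q^2 T$, both of which are positive, so $T\le 1$ and $T\le q^{-2}$; in particular $\mathrm{Spect}(T)\subseteq[0,1]$.

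The main step is a climbing argument on the spectrum. Given $\mu\in\mathrm{Spect}(T)$ with $0<\mu<1$, pick a sequence of unit vectors $(\phi_n)$ with $(T-\mu)\phi_n\to 0$. From $\|a^*\phi_n\|^2=\langle\phi_n,(1-T)\phi_n\rangle\to 1-\mu>0$, the vectors $a^*\phi_n$ are eventually bounded away from $0$, while
\begin{equation*}
(T-q^{-2}\mu)\,a^*\phi_n \;=\; q^{-2}a^*(T-\mu)\phi_n \longrightarrow 0,
\end{equation*}
so $q^{-2}\mu\in\mathrm{Spect}(T)$. Iterating, the sequence $q^{-2j}\mu$ stays in $\mathrm{Spect}(T)\subseteq[0,1]$ until it hits $1$; since otherwise it would overshoot $1$, we must have $q^{-2k}\mu=1$ for some $k\ge 0$, i.e.\ $\mu=q^{2k}$. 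This shows $\mathrm{Spect}(T)\subseteq\{0\}\cup\{q^{2k}:k\in\mathbb{N}\}$ (with $1$ included when allowed). The symmetric descending argument, using $\|a\phi_n\|^2\to 1-q^2\mu>0$ and $(T-q^2\mu)a\phi_n=q^2 a(T-\mu)\phi_n\to 0$, shows that if any nonzero $\mu$ belongs to $\mathrm{Spect}(T)$, then so does the whole sequence $q^{2j}\mu$, whose accumulation point $0$ therefore lies in $\mathrm{Spect}(T)$ by closedness. This gives the reverse inclusion as soon as at least one $q^{2k}$ lies in the spectrum. If on the contrary $\mathrm{Spect}(T)=\{0\}$, then $T=\pi(bb^*)=0$ forces $\pi(b)=0$, which is exactly the alternative in the statement.

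The delicate point I expect is the approximate-eigenvector bookkeeping: one must ensure that $a^*\phi_n$ (resp.\ $a\phi_n$) stays bounded below so that normalizing still produces a Weyl sequence, which is exactly why the strict inequalities $\mu<1$ and $\mu<q^{-2}$ are needed, and why the extreme values have to be handled separately by the alternative $\pi(b)=0$. Once this is under control, the second assertion is immediate: the relation $\pi(aa^*)=1-T$ gives $\mathrm{Spect}(\pi(aa^*))=1-\mathrm{Spect}(T)=\{1\}\cup\{1-q^{2k}:k\in\mathbb{N}\}$, and in the degenerate case $\pi(b)=0$ the two defining relations $aa^*=1$ and $a^*a=1$ hold, so $\pi(a)$ is unitary.
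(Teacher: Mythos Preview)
The paper does not actually prove this lemma; it merely recalls the statement from Woronowicz \cite[Lemma~A2.1]{Woronowicz}. Your argument is correct and is precisely the standard ladder proof one finds there: the intertwining $T\pi(a^*)=q^{-2}\pi(a^*)T$ together with $\pi(aa^*)=1-T$ pushes spectral values up by a factor $q^{-2}$ (via Weyl sequences, which exist since $T$ is selfadjoint) until they would exceed $1$, forcing $\mu=q^{2k}$; the dual relation with $\pi(a)$ and $\pi(a^*a)=1-q^2T$ pushes down.

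One small presentational point on the reverse inclusion: descending alone from some $q^{2k_0}\in\mathrm{Spect}(T)$ only yields $q^{2k}$ for $k\ge k_0$ together with the accumulation point $0$; to obtain the smaller exponents $q^{2k}$ with $k<k_0$ (in particular $q^0=1$) you must also invoke the climbing step from $q^{2k_0}$ up to $1$. Your first paragraph already establishes that this is possible, but you should mention it explicitly when asserting that the inclusion is an equality.
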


This result is interesting since it shows the appearance of
discreteness for $0 \leq q<1$ while for $q=1$, $SU_q(2)=SU(2) \simeq
\mathbb{S}^3$ and the spectrum of the commuting operator $\pi(aa^*)$
and $\pi(bb^*)$ are equal to $[0,1]$. Moreover, all foregoing results
on noncommutative integrals will involve $q^2$ and not $q$. 

\vspace{0.5cm}
Any element of $\A$ can be
uniquely decomposed as a
linear combination of terms of the form
$a^\a b^\beta {b^{*}}^\gamma$ where $\a\in \Z$,
$\beta, \gamma \in \N$, with the convention
$$
a^{-\vert \a\vert} := {a^*}^{\vert \a \vert}.
$$

{\it The spinorial Hilbert space:}

$\H=\H^{\up} \oplus \H^{\dn}$ has an orthonormal
basis consisting of vectors $\ket{j\mu n\up}$
with $j = 0,\half,1,\dots$,
$\mu = -j,\dots,j$ and $n = -j^+,\dots,j^+$, together with
$\ket{j\mu n\dn}$ for $j=\half,1,\dots$, $\mu = -j,\ldots,j$ and
$n = -j^-,\dots,j^-$ (here $x^{\pm}:=x\pm \half$).

It is convenient to use a vector notation, setting:
\begin{equation}
    \label{eq:kett-defn}
\kett{j\mu n} :=\genfrac{(}{)}{0pt}{1}{\, \ket{j\mu n\up}\,}
{\,\ket{j\mu n\dn}\,}
\end{equation}
and with the convention that the lower component is zero when
$n = \pm(j + \half)$ or $j = 0$.
\vspace{0.5cm}

{\it The representation $\pi$ and its approximate $\ul \pi$:}

It is known that representation theory of $SU_q(2)$ is similar to
that of $SU(2)$ \cite{Woronowicz}. The representation $\pi$ given in
\cite{DLSSV} is:
\begin{align}
  \label{eq:reprexact}
\pi(a)\, \kett{j\mu n } &:= \alpha^+_{j\mu n}\,\kett{j^+ \mu^+ n^+}
+\alpha^-_{j\mu n}\,\kett{j^- \mu^+ n^+},\nonumber \\
\pi(b)\, \kett{j\mu n } &:= \beta^+_{j\mu n}\,\kett{j^+ \mu^+ n^-}
+\beta^-_{j\mu n}\,\kett{j^- \mu^+ n^-},\nonumber\\
\pi(a^*)\, \kett{j\mu n } &:= \tilde{\alpha}^+_{j\mu n}\,\kett{j^+
\mu^- n^-}+\tilde{\alpha}^-_{j\mu n}\,\kett{j^- \mu^- n^-},\nonumber
\\
\pi(b^*)\, \kett{j\mu n } &:= \tilde{\beta}^+_{j\mu n}\,\kett{j^+
\mu^- n^+}+\tilde{\beta}^-_{j\mu n}\,\kett{j^- \mu^- n^+}
\end{align}
where
\begin{align*}
\alpha^+_{j\mu n}&:=\sqrt{ q^{\mu +n-1/2} [j+\mu+1] }
\left( \begin{array}{cc}
q^{-j-1/2}\tfrac{ \sqrt{[j+n+3/2]}}{[2j+2]} &   0\\
q^{1/2}\tfrac{ \sqrt{[j-n+1/2]}}{[2j+1][2j+2]} &
q^{-j}\tfrac{\sqrt{[j+n+1/2]}}{[2j+1]}  
\end{array} \right),\\
\alpha^-_{j\mu n}&:=\sqrt{ q^{\mu +n+1/2} [j-\mu] }
\left( \begin{array}{cc}
q^{j+1}\tfrac{ \sqrt{[j-n+1/2]}}{[2j+1]} &   -q^{1/2}
\tfrac{\sqrt{[j+n+1/2]}}{[2j][2j+1]}\\
0 & q^{j+1/2}\tfrac{\sqrt{[j-n-1/2]}}{[2j]}  \\
\end{array} \right),\\
\beta^+_{j\mu n}&:=\sqrt{ q^{\mu +n-1/2} [j+\mu+1] }
\left( \begin{array}{cc}
\tfrac{ \sqrt{[j-n+3/2]}}{[2j+2]} &   0\\
-q^{-j-1}\tfrac{ \sqrt{[j+n+1/2]}}{[2j+1][2j+2]} &
q^{-1/2}\tfrac{\sqrt{[j-n+1/2]}}{[2j+1]}  
\end{array} \right),\\
\beta^-_{j\mu n}&:=\sqrt{ q^{\mu +n-1/2} [j-\mu] }
\left( \begin{array}{cc}
- q^{-1/2}\tfrac{\sqrt{[j+n+1/2]}}{[2j+1]} &   -q^{j}\tfrac{
\sqrt{[j-n+1/2]}}{[2j][2j+1]} \\
0 & -\tfrac{\sqrt{[j+n-1/2]}}{[2j]}  
\end{array} \right)
\end{align*}
with $\tilde{\alpha}^{\pm}_{j\mu n}:=(\alpha^{\mp}_{j^{\pm}\mu^-
n^-})^*$, $\tilde{\beta}^{\pm}_{j\mu n}:=(\beta^{\mp}_{j^{\pm}\mu^-
n^+})^*$ and with the q-number of $\alpha \in \R$ be defined as
$$
[\a]:=\tfrac{q^\a-q^{-\a}}{q-q^{-1}}.
$$
For the purpose of this chapter it is sufficient to use the approximate
spinorial $^*$-representation $\piappr$ of $SU_q(2)$ presented in
\cite{SDLSV,DLSSV} instead of the full spinorial one $\pi$. 

This approximate representation is
$$
\piappr(a) := {a}_+ + {a}_-, \quad \piappr(b) := {b}_+ + {b}_-
$$
with the following definitions, where $q_n:=\sqrt{1-q^{2n}}$:
\begin{align}
    \label{eq:rpnappr}
{a}_+ \,\kett{j\mu n}
&:= q_{j^++\mu^+} \,
\genfrac{(}{)}{0pt}{1}{q_{j^+ +n^+ +1} \qquad
0}{\quad 0 \quad \quad\quad  q_{j^+ +n}}\,\kett{j^+ \mu^+
n^+}
,\nonumber\\
{a}_- \,\kett{j\mu n}
&:= q^{2j+\mu+n+\half} \, \genfrac{(}{)}{0pt}{1}{q \quad  0}{0\quad 1
}\,\kett{j^- \mu^+ n^+},\nonumber\\
{b}_+ \,\kett{j\mu n} &:= q^{j+n-\half} q_{j^+ +\mu^+} \,
\genfrac{(}{)}{0pt}{1}{q
\quad  0}{0\quad 1 }\,\kett{j^+ \mu^+ n^-},\nonumber\\
{b}_- \,\kett{j\mu n}
&:= -q^{j+\mu} \,
\genfrac{(}{)}{0pt}{1}{q_{j^+ + n}  \quad
0\quad}{\quad0  \quad q_{j^- + n}}\,\kett{j^- \mu^+
n^-}.
\end{align}
All disregarded terms are trace-class and do not influence residue
calculations. More precisely, $\pi(x)-\piappr(x) \in \K_{q}$ where
$\K_{q}$ is the principal ideal generated by the operator
\begin{align}
\label{J_q}
J_{q}\, \kett{j\mu n}:= q^j\,\kett{j\mu n}.
\end{align}
Actually, $\K_{q}$ is independent of $q$ and is contained in all
ideals of operators such that $\mu_n=o(n^{-\a})$  (infinitesimal of
order $\a$) for any $\a>0$, and $\K_{q} \subset OP^{-\infty}$.

\vspace{0.5cm}

We define the alternative orthonormal basis $v^{j\up}_{m,l}$ and
$v^{j\dn}_{m,l}$ and the vector notation 
$$
v^j_{m,l}:=\genfrac{(}{)}{0pt}{1}{\,v^{j\up}_{m,l}\,}{\,v^{j\dn}_{m,l}\,}
\text{ where } v^{j\up}_{m,l}:=\ket{j,m-j,l-j^{+},\up},\quad
v^{j\dn}_{m,l}:=\ket{j, m-j, l-j^{-},\dn}.
$$
Here $j\in \half \N$, $0\leq m \leq 2j$, $0\leq l\leq 2j+1$
and $v^{\dn,j}_{m,l}$ is zero whenever $j=0$ or $l = 2j$ or $2j+1$.
The interest is that now, the operators $a_{\pm}$ and $b_{\pm}$
assume simpler form:
\begin{align}
a_{+}\, v^j_{m,l} &= q_{m+1} \,q_{l+1}\, v^{j^+}_{m+1,l+1}\,,\quad
a_-\,v^j_{m,l} = q^{m+l+1}\, v^{j^-}_{m,l} \,,\nonumber\\
b_+\, v^j_{m,l} &= q^{l}\,q_{m+1}\, v^{j^+}_{m+1,l} \,,\qquad \quad
b_-\, v^j_{m,l} = -q^{m}\,q_{l}\,  v^{j^-}_{m,l-1}\,.\label{defonv}
\end{align}
Thus
\begin{align}
a_{+}^*\, v^j_{m,l} &= q_{m} \,q_{l}\, v^{j^-}_{m-1,l-1}\,,\qquad
a_{-}^*\,v^j_{m,l} = q^{m+l+1}\, v^{j^+}_{m,l} \,,\nonumber\\
b_{+}^*\, v^j_{m,l} &= q^{l}\,q_{m}\, v^{j^-}_{m-1,l} \,,\qquad \quad
b_{-}^*\, v^j_{m,l} = -q^{m}\,q_{l+1}\,  v^{j^+}_{m,l+1}\,
\label{defadjointonv}.
\end{align}
Moreover, we have
\begin{align}
    \label{eq:commutrulefora,b}
&a_- a_+ = q^{2}\, a_+ \, a_-\,, \quad b_-b_{+}=q^2\,b_+b_-\,, \qquad
b_+a_+ = q\,a_+b_+\,,  \quad b_-a_{-} = q \, a_-b_-\,,\nonumber\\
&a_{-}^*a_{+}=q^2\, a_{+}a_{-}^*\,, \quad a_{-}^*a_{-}=
a_{-}a_{-}^*\,, \ \ \qquad a_{-}^*b_{+}=q\, b_{+}a_{-}^*\,,
\quad a_{-}^*b_{-}=q\,b_{-}a_{-}^*\,,\nonumber\\
&a_{+}^*a_{-}=q^2\,a_{-}a_{+}^*\,, \quad b_{-}^{*}b_{+}=
b_{+}b_{-}^*\,,\ \ \   \qquad b_{-}^*a_{+}=q\,a_{+}b_{-}^*\,,
\quad a_{-}b_{+}=q\, b_{+}a_{-}\,.
\end{align}
Note for instance that
\begin{align*}
&a_{+}a_{+}^* \, v^j_{m,l}=q_{m}^2q_{l}^2\, v^j_{m,l}\,, \quad
a_{+}^*a_{+}\, v^j_{m,l}=q_{m+1}^2q_{l+1}^2\, v^j_{m,l}\,,\\
& b_{+}b_{+}^* \, v^j_{m,l}=q^{2l}q_{m}^2\, v^j_{m,l}\,,
\quad b_{+}^*b_{+} \, v^j_{m,l}=q^{2l}q_{m+1}^2\, v^j_{m,l}\,,
\end{align*}
so applied to $v^j_{m,l}$, we get the first relation (and similarly
for the others)
\begin{align}
& a_{+}^{*}a_{+}-q^{2}\,a_{+}a_{+}^{*}+q^2\,(b_{+}^{*}b_{+}
-b_{+}b_{+}^{*})=1-q^2,\label{astuce1}\\
& a_{+}a_{+}^*+a_{-}a_{-}^*+b_{+}b_{+}^*+b_{-}b_{-}^*=1,
\label{astuce1'}\\
& a_{+}^*a_{+}+a_{-}^*a_{-}+q^2\,(b_{+}^*b_{+}+b_{-}^*b_{-})=1,
\label{astuce1''}\\
& a_{-}^{*}a_{-}-q^{2}\,a_{-}a_{-}^{*}+q^2\,b_{-}^{*}b_{-}
-q^2\, b_{-}b_{-}^{*}=0,\label{astuce2}\\
& a_{+}a_{-}^*+b_{-}^*b_{+}=0,
\hspace{3cm} a_{-}^*a_{+}+q^2\,b_{-}^*b_{+}=0, \label{astuce3}\\
& a_{-}a_{+}^*+b_{+}^*b_{-}=0,
\hspace{3cm} a_{+}^*a_{-}+q^2\,b_{+}^*b_{-}=0,\label{astuce4}\\
& b_{+}b_{+}^{*}-b_{+}^{*}b_{+}+b_{-}b_{-}^{*}
-b_{-}^{*}b_{-}=0, \label{astuce5}\\
& q\,a_{+}b_{-}-b_{-}a_{+}+q\, a_{-}b_{+}-b_{+}a_{-}=0.
\label{astuce6}
\end{align}

\vspace{0.5cm}

{\it And two others:}

Note that we also use two other infinite
dimensional
$^*$-representations $\pi_{\pm}$ of $\A$ on $\ell^2(\N)$ defined as
follows
on the orthonormal basis $\set{\eps_{n}:n\in\N}$ of $\ell^2(\N)$ by
\begin{align}
 \label{eq:pi-pm}
\pi_\pm(a) \,\eps_n &:= q_{n+1}\,\eps_{n+1},  \qquad
\pi_\pm(b) \,\eps_n := \pm q^n \,\eps_n \, .
\end{align}
These representations are irreducible but not faithful since for
instance $\pi_{\pm}(b-b^*)=0$.

\vspace{0.5cm}

{\it The Dirac operator:}

It is chosen as in the
classical case of a 3-sphere with the round metric:
\begin{align}
\DD \,\kett{j\mu n} :=
\genfrac{(}{)}{0pt}{1}{2j + \sesq \quad\,\,\,0}{\,0 \quad -2j
- \half} \, \kett{j\mu n},  \label{eq:dirac}
\end{align}
which means, with our convention, that
$\DD\,v^j_{ml}=\genfrac{(}{)}{0pt}{1}{2j + \sesq \quad\,\,\,0}{\,0
\quad -2j- \half} \, v^j_{ml}$. 
Note that this operator is invertible (and thus $D = \DD$, $P_0=0$). Moreover, it is asymptotically diagonal with linear
spectrum and

\centerline{ the eigenvalues $2j+\tfrac12$ for $j\in \tfrac12 \N$,
have
multiplicities $(2j+1)(2j+2)$,}
\centerline{the eigenvalues $-(2j+\tfrac12)$ for $j\in \tfrac12
\N^*$, have
multiplicities $2j(2j+1)$.}

So this Dirac operator coincides exactly with the classical one on the
3-sphere (see \cite{Bar, Homma}) with a gap around 0.

Let $\DD=F\vert \DD \vert$ be the polar decomposition of $\DD$, thus
\begin{align}
\vert \DD \vert \,\kett{j\mu n} &=
\genfrac{(}{)}{0pt}{1}{d_{j^+}\,\,\,0}{\,0 \quad d_j} \,
\kett{j\mu n}, \quad d_j:=2j+\half \,, \label{eq:absdirac}
\\
F \,\kett{j\mu n} &=\genfrac{(}{)}{0pt}{1}{1 \quad\,\,\,0}
{\,0 \quad -1} \, \kett{j\mu n}, \label{eq:defF}
\end{align}
and it follows from \eqref{eq:rpnappr} and \eqref{eq:defF} that
\begin{align}
    \label{Fcommutes}
F \text{ commutes with } a_{\pm},\, b_{\pm}.
\end{align}

\quad

{\it The reality operator:}

This antilinear operator $J$ is defined on the basis of $\H$ by
\begin{align}
  \label{DefJ}
J \,\vert j,\mu, n,\up \rangle:=i^{2(2j+\mu+n)}\,  \vert j,-\mu,- n,
\up\rangle, \qquad
J \,\vert j,\mu, n,\dn \rangle:=i^{2(2j-\mu-n)}\,  \vert j,-\mu,- n,
\dn\rangle
\end{align}
thus it satisfies
\begin{align*}
&J^{-1} = -J=J^* \text{ and } \DD J=J\DD, \\
&J  \, v^{j\up}_{m,l}= i^{2(m+l)-1}v^{j\up}_{2j-m,2j+1-l} \,,\qquad
J  \, v^{j\dn}_{m,l}= i^{-2(m+l)+1}v^{j\dn}_{2j-m,2j-1-l}\,.
\end{align*}
\vspace{0.5cm}

We denote

$\quad \B$ the $^*$-subalgebra of $\B(\H)$ generated
by the operators in $\delta^k(\pi(\A))$ for all $k\in \N$,

$\quad \Psi_{0}^{0}(\A)$ the algebra generated by $\delta^{k}\big(
\pi(\A)\big)$ and $\delta^{k}([\DD,\pi(\A)])$ for all $k\in \N$,

$\quad X$ the $^*$-subalgebra of $\B(\H)$ algebraically generated
by the set $\set{a_{\pm},b_{\pm}}$.

Note that $\Psi_0^0(\A)$ is a subalgebra of $\Psi^0(\A)$ (the space of pseudodifferential operators of order less or equal to zero).
\vspace{0.5cm}

{\it The Hopf map $r$}

For the explicit calculations of residues, we need
a $^*$-homomorphism $r:X \rightarrow \pi_{+}(\A) \ox \pi_{-}(\A)$
defined by the tensor product in the sense of Hopf algebras of
representations $\pi_{+}$ and $\pi_{-}$:
\begin{align}
    \label{eq:r}
r(a_+) &:= \pi_+(a) \ox\pi_-(a), &
r(a_-) &:= -q\,\pi_+(b) \ox \pi_-(b^*),\nonumber\\
r(b_+) &:= - \pi_+(a) \ox \pi_-(b), &
r(b_-) &:= - \pi_+(b) \ox \pi_-(a^*).
\end{align}
In fact, $\A$ is a Hopf $^*$-algebra under the coproduct
$\Delta(a):=a\otimes a -q\, b\otimes b^{*},\, \Delta(b):= a\otimes
b+b\otimes a^*$. These homomorphisms appeared in \cite{Woronowicz}
with the translation $\alpha \leftrightarrow a^{*},\, \gamma
\leftrightarrow -b$.
In particular, if
$U:=\genfrac{(}{)}{0pt}{1}{a \quad \,\,\,\,\,\,b}{-qb^* \quad a^*}$
is the canonical generator of the $K_{1}(\A)$-group
$(\Delta a,\, \Delta b)=
(a,\,b) \dot{\otimes}\, U$ where the last $\dot{\otimes}$
means the matrix product of tensors of components.

\vspace{0.5cm}

{\it The grading:}

According to the shift $j\rightarrow j^{\pm}$
appearing in formulae
\eqref{defonv}, \eqref{defadjointonv}, we get a $\Z$-grading on $X$
defined by the
degree $+1$ on
$a_{+},b_{+},{a_{-}}^*,{b_{-}}^*$ and $-1$ on
$a_{-},b_{-},{a_{+}}^*,{b_{+}}^*$.

Any operator $T\in X$ can be (uniquely) decomposed as
$T=\sum_{j\in J\subset \Z} T_j$ where $T_j$ is homogeneous of degree
$j$.

For $T \in X$, $T^\circ$ will denote the $0$-degree part
of $T$ for this grading and by a slight abuse of notations, we write
$r(T)^\circ$ instead of $r(T^\circ)$.
\vspace{0.5cm}

{\it The symbol map:}

We also use the $^*$-homomorphism
$\sg$: $\pi_{\pm}(\A) \to C^{\infty}(S^1)$ defined for $z \in S^1$ on
the generators
by
$$
\sg\big(\pi_{\pm}(a)\big)(z) := z, \;\;\;
\sg\big(\pi_{\pm}(a^*)\big)(z) := \bar z, \;\;\;
\sg\big(\pi_{\pm}(b)\big) (z)=
\sg\big(\pi_{\pm}(b^*)\big)(z):=0.
$$
The application $(\sg \otimes \sg) \circ r $ is defined on $X$ (and
so on $\B$) with values in
$C^{\infty}(S^1) \otimes C^{\infty}(S^1)$.
\vspace{0.5cm}

We define
$$
dT:=[\DD,T] \text{ and } \delta(T):=[\vert \DD \vert, T].
$$
\begin{lemma}
    \label{commutateur}
$a_{\pm}$, $b_{\pm}$ are bounded operators on $\H$ such that for all
$p\in \N$,

(i) $\delta(a_\pm) = \pm a_\pm \, , \quad \delta(b_\pm)=\pm b_\pm\, ,$

(ii) $\delta^p(\ul\pi(a))= a_+ + (-1)^p a_- \, , \quad
\delta^p(\ul\pi(b))= b_+ + (-1)^p b_- \, , $

(iii) $\delta(a_\pm^p) = \pm p\, a_\pm^p \, , \quad \delta(b_\pm^p) =
\pm p\, b_\pm ^p\, .$
\end{lemma}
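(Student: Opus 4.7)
The plan is to read off all three statements directly from the explicit action of $a_\pm,b_\pm$ on the basis $v^j_{m,l}$ given in formulas \eqref{defonv}, combined with the diagonal action of $|\DD|$ given in \eqref{eq:absdirac}.

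First I would establish boundedness. Each of $a_\pm,b_\pm$ sends a basis vector $v^j_{m,l}$ to a scalar multiple of a single other basis vector, and the scalars appearing are all of the form $q_n q_m$, $q^{m+l+1}$, $q^l q_m$, or $q^m q_l$, each bounded by $1$ since $0<q<1$ and $q_n=\sqrt{1-q^{2n}}\le 1$. Mapping orthonormal vectors to at most unit-norm orthogonal vectors yields operators of norm at most $1$, so they extend to bounded operators on $\H$.

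Next I would prove (i). Since $|\DD|$ acts diagonally with eigenvalue $d_{j^+}=2j+\tfrac32$ on $v^{j\up}_{m,l}$ and $d_j=2j+\tfrac12$ on $v^{j\dn}_{m,l}$, and since $a_+$ shifts the spin label $j \mapsto j^+=j+\tfrac12$ while preserving the up/down type, the eigenvalue of $|\DD|$ on the image increases by exactly $1$ in both the up and down components. A one-line computation of $|\DD|a_+v^j_{m,l} - a_+|\DD|v^j_{m,l}$ on each component therefore yields $\delta(a_+)=+a_+$. The same bookkeeping with $j\mapsto j^-$ gives $\delta(a_-)=-a_-$, and the analogous shifts for $b_\pm$ (which similarly change $j$ by $\pm\tfrac12$ without changing up/down type) give $\delta(b_\pm)=\pm b_\pm$.

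For (ii), since $\piappr(a)=a_++a_-$ and $\piappr(b)=b_++b_-$ by definition \eqref{eq:rpnappr}, the linearity of $\delta$ together with (i) gives
\[
\delta^p(\piappr(a)) = \delta^p(a_+) + \delta^p(a_-) = a_+ + (-1)^p a_-,
\]
and likewise for $b$. Finally, for (iii), I would apply the Leibniz rule $\delta(XY)=\delta(X)Y+X\delta(Y)$ inductively: from $\delta(a_\pm)=\pm a_\pm$ we get $\delta(a_\pm^p)=\sum_{k=0}^{p-1}a_\pm^k(\pm a_\pm)a_\pm^{p-1-k}=\pm p\, a_\pm^p$, and similarly $\delta(b_\pm^p)=\pm p\, b_\pm^p$. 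There is no serious obstacle here; the entire lemma is really a direct bookkeeping consequence of the explicit formulas, the key observation being simply that $a_\pm$ and $b_\pm$ are the homogeneous components of degree $\pm 1$ of $\piappr(a)$ and $\piappr(b)$ under the spectral grading of $|\DD|$.
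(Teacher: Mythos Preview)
Your proposal is correct and follows essentially the same approach as the paper: both compute $\delta(a_\pm)$ directly from the fact that $a_\pm,b_\pm$ shift the index $j$ by $\pm\tfrac12$ while $|\DD|$ is diagonal with eigenvalue depending affinely on $j$, and both derive (ii) and (iii) as immediate consequences of (i) via linearity and the Leibniz rule. You are slightly more explicit than the paper in spelling out the boundedness argument, which the paper asserts but does not detail.
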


\begin{proof}
$(i)$ By definition,
$a_\pm \, \kett{j\mu n} =\genfrac{(}{)}{0pt}{1}{\alpha_{\pm}\quad
0\,}{\,0 \quad\beta_{\pm}} \,\kett{j^\pm \mu^+ n^+}$ where the numbers
$\a_\pm$ and $\beta_\pm$ depend on $j$, $\mu$, $n$ and $q$, so we get
by \eqref{eq:absdirac}
\begin{align*}
\delta(a_\pm)  \kett{j\mu n} &=\genfrac{(}{)}{0pt}{1}{(d_{j^{+\pm}})
\a_\pm\quad 0\,}{\quad0 \qquad d_{j^\pm}\,\beta_{\pm}} \,
\kett{j^\pm \mu^+ n^+} -
\genfrac{(}{)}{0pt}{1}{(d_{j^{+}})
\a_\pm\quad 0\,}{\quad0 \qquad d_{j}\,\beta_{\pm}} \,\kett{j^\pm \mu^+
n^+} \\
&=  \genfrac{(}{)}{0pt}{1}{\pm\alpha_{\pm}\quad
0\,}{\,0 \quad \pm \beta_{\pm}} \, \kett{j^\pm \mu^+
n^+}=\pm a_\pm  \,\kett{j\mu n}
\end{align*}
and similar proofs for $b_\pm$.

$(ii)$ and $(iii)$ are straightforward consequences of $(i)$ and
definition of $\ul \pi$.
\end{proof}

\begin{remark}
\label{pseudodiff}
By Lemma \ref{commutateur}, we see that, modulo $OP^{-\infty}$, $X$
is equal to $\B$ and in particular contains $\pi(\A)$.

Using \eqref{Fcommutes}, we get that
$\B\subset \Psi_{0}^0(\A) \subset \text{algebra generated by
$\B$ and $\B F$}$.
\end{remark}
Note that, despite the last inclusion, $F$ is not a priori
in $\Psi_{0}^0(\A)$.

\quad

\subsection{The noncommutative integrals}

Recall that for any pseudodifferential operator $T$,
$\ncint T:=\underset{s=0}{\Res} \,\zeta_\DD^T(s)$ where
$\zeta_\DD^T(s):=\Tr(T \vert \DD \vert^{-s})$.

\begin{theorem}
    \label{Theo}
The dimension spectrum (without reality structure given by $J$) of
the spectral triple
$\big(\A(SU_q(2)),\H,\DD\big)$ is simple and equal to $\{1,2,3\}$. 

Moreover,  the corresponding residues for
$T \in \B$ are
\begin{align*}
&\ncint T |\DD|^{-3} = 2(\tau_1 \ox \tau_1) \bigl(r(T)^\circ \bigr),\\
&\ncint T |\DD|^{-2}= 2 \bigl(\tau_1 \ox \tau_0 +
\tau_0 \ox \tau_1 \bigr) \bigl(r(T)^\circ\bigr),\\
&\ncint T |\DD|^{-1}= \big(2 \, \tau_0 \ox \tau_0 - \tfrac{1}{2}
\, \tau_1 \ox \tau_1\big)\bigl(r(T)^\circ\bigr),\\
&\ncint F\,T |\DD|^{-3} =0,\\
&\ncint F\,T |\DD|^{-2}= 0,\\
&\ncint F\,T |\DD|^{-1}=\big(\tau_{0} \ox \tau_{1}-\tau_{1} \ox
\tau_{0}\big)\big(r(T)^\circ\bigr),
\end{align*}
where the functionals $\tau_0$, $\tau_1$ are defined for $x \in
\pi_{\pm}(\A)$ by
\begin{align*}
\tau_0(x) := \lim_{N\to\infty} \big(\Tr_N x -(N+1) \,
\tau_1(x)\big), \qquad
\tau_1(x) := \tfrac{1}{2\pi} \int_{0}^{2\pi} \sg(x)(e^{i\theta})
\,d\theta,
\end{align*}
with $\Tr_N x=\sum_{n=0}^N \langle \eps_{n},x\,\eps_{n}\rangle$.
\end{theorem}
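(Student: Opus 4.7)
My strategy is to reduce everything to a computation on $\ell^2(\N)\otimes \ell^2(\N)$ via the Hopf homomorphism $r$, and then extract the residues using the asymptotic behaviour of truncated traces captured by $\tau_0$ and $\tau_1$.

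First, I would show that one can replace $\pi$ by $\underline{\pi}$ throughout the computation: the difference lies in $\K_q \subseteq OP^{-\infty}$, so it contributes no residues. Hence any $T\in\B$ can be taken to lie in the $^*$-algebra $X$ generated by $a_{\pm},b_{\pm}$ (modulo smoothing). Next, I would use the $\Z$-grading on $X$ to split $T=\sum_j T_j$, and observe that $T_j$ with $j\neq 0$ shifts the spin label $j\mapsto j+j'$ for some $j'\neq 0$; since $|\DD|$ is diagonal in the basis $v^j_{m,l}$, such off-diagonal operators contribute nothing to $\Tr(T|\DD|^{-s})$. Thus only $T^\circ$ matters, and $T^\circ$ is diagonal in $v^j_{m,l}$.

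Second, I would establish the key identification: comparing the action of $a_\pm,b_\pm$ on $v^j_{m,l}$ (formulas \eqref{defonv}) with the action of $r(a_\pm), r(b_\pm)$ on $\eps_m\otimes\eps_l$ (definition \eqref{eq:r} combined with \eqref{eq:pi-pm}), one sees that on zero-degree elements the $j$-shift disappears and
\[
\langle v^{j\,\up}_{m,l},\,T^\circ_{\up\up}\,v^{j\,\up}_{m,l}\rangle
=\langle \eps_m\otimes\eps_l,\,r(T^\circ)(\eps_m\otimes\eps_l)\rangle,
\]
with an analogous identity in the $\dn$-sector. Consequently, writing $d_j=2j+\tfrac12$ and using the multiplicities from the eigenbasis,
\[
\Tr\bigl(T^\circ|\DD|^{-s}\bigr)
=\sum_{2j\in\N_0} d_{j^+}^{-s}\,\Tr_{2j,\,2j+1}\bigl(r(T^\circ)\bigr)
+\sum_{2j\in\N}d_{j}^{-s}\,\Tr_{2j,\,2j-1}\bigl(r(T^\circ)\bigr),
\]
where $\Tr_{N,N'}(x\otimes y):=(\Tr_N x)(\Tr_{N'} y)$ and is extended by linearity to $\pi_+(\A)\otimes\pi_-(\A)$. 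Inserting $F$ converts the sum of the two spinor sectors into a difference.

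Third, I would apply the defining asymptotic $\Tr_N x=(N+1)\tau_1(x)+\tau_0(x)+o(1)$ valid for $x\in\pi_\pm(\A)$ (and its tensor-product version for $r(T^\circ)$), so that $\Tr_{N,N'}(r(T^\circ))$ expands as a polynomial in $(N,N')$ of total degree at most $2$ whose coefficients are $(\tau_i\otimes\tau_j)(r(T^\circ))$, up to a term $o(N+N')$. Plugging in $N=2j$ and $N'=2j\pm 1$, each resulting zeta-type sum $\sum_{2j\in\N_0}(2j+c)^{-s}$, $\sum (2j+c)^{1-s}$, $\sum(2j+c)^{2-s}$ is a Hurwitz zeta function with simple poles at $s=1,2,3$ respectively; collecting the coefficients at each pole gives precisely the formulae claimed. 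For the $F$-insertions, the cancellation between the $\up$ and $\dn$ sectors kills the leading $(2j)^2$ and $(2j)^1$ contributions (since the two ranges of $l$ differ only by $\pm1$), so the residues at $s=3$ and $s=2$ vanish, and only the subleading term survives at $s=1$, yielding $(\tau_0\otimes\tau_1-\tau_1\otimes\tau_0)(r(T^\circ))$.

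The main obstacle will be the careful bookkeeping of the multiplicities and the range shifts between the $\up$ and $\dn$ sectors, together with verifying that the tensor-product asymptotic $\Tr_{N,N'}=\text{polynomial}+o(N+N')$ is uniform enough to justify computing residues term by term; this, combined with proving that only $T^\circ$ matters (in particular that the off-diagonal $T_j$'s really drop out when acting on $|\DD|^{-s}$, which uses both the grading and simplicity of the dimension spectrum), is the bulk of the analytic work. Once done, the finiteness and simplicity of the dimension spectrum $\{1,2,3\}$ follow immediately from the pole structure of the Hurwitz zetas above.
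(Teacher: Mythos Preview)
The paper's own ``proof'' of this theorem is a one-line citation to \cite[Theorem 4.1 and (4.3)]{SDLSV}; no argument is given in the thesis itself. Your proposal is effectively a reconstruction of that cited proof, and it is also exactly the method the thesis \emph{does} carry out in detail later for the version with reality structure (Theorem~\ref{ncinttau} and its supporting Lemma~\ref{tau0ext}). So your route is the correct and expected one; there is nothing conceptually different to compare.

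One point deserves care. Your key identity
\[
\langle v^{j\,\up}_{m,l},\,T^\circ_{\up\up}\,v^{j\,\up}_{m,l}\rangle
=\langle \eps_m\otimes\eps_l,\,r(T^\circ)(\eps_m\otimes\eps_l)\rangle
\]
is not literally exact: when a monomial in $a_\pm,b_\pm$ of degree $0$ acts on $v^{j}_{m,l}$, some intermediate states $v^{j'}_{m',l'}$ may fall outside the range $0\le m'\le 2j'$, $0\le l'\le 2j'+1$ (for instance after applying $a_-$ or $b_-$ at $m=2j$), whereas on $\ell^2(\N)\otimes\ell^2(\N)$ there is no such constraint. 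The saving feature is that every such ``boundary'' step carries a factor $q^m$ or $q^l$ with $m$ or $l$ of size $\sim 2j$, so the discrepancy is $O(q^{2j})$ uniformly; hence the difference between your truncated trace on $\H$ and the one on $\ell^2(\N)\otimes\ell^2(\N)$ contributes an entire function of $s$ and does not affect any residue. This is also why the approximate representation $\underline\pi$ suffices. You should make this explicit, since otherwise the claimed equality is false as stated. Once that is in place, the remainder of your outline (expansion $\Tr_N x=(N+1)\tau_1(x)+\tau_0(x)+o(1)$, insertion into Hurwitz-type sums, and the $\up/\dn$ cancellation for the $F$-terms) goes through and yields the six displayed formulae and the simplicity of the dimension spectrum.
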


\begin{proof}
Consequence of \cite[Theorem 4.1 and (4.3)]{SDLSV}.
\end{proof}

\begin{remark}
Since $F$ is not in $\B$, the equations of Theorem \ref{Theo} are not
valid for all $T\in\Psi_{0}^0(\A)$.\\
But when $T \in \Psi_{0}^0(\A)$, $\ncint T|\DD|^{-k}=0  \text{ for k}
\notin \set{1,2,3}$ since the dimension spectrum is $\set{1,2,3}$
\cite{SDLSV}.
\end{remark}
Compared to \cite{SDLSV} where we had
\begin{align*}
\tau_0^\up(x) := \lim_{N\to\infty} \Tr_N x - (N+\sesq) \,
\tau_1(x),\;\;\;\;\;
\tau_0^\dn(x) := \lim_{N\to\infty} \Tr_N x - (N+\half) \, \tau_1(x),
\end{align*}
we replaced them with $\tau_0$:
$$
\tau_0^\up = \tau_0 - \tfrac{1}{2} \tau_1, \;\;\;\;\;
\tau_0^\dn = \tau_0 + \tfrac{1}{2} \tau_1.
$$
Note that $\tau_{1}$ is a trace on $\pi_{\pm}(\A)$ such that
$\tau_{1}(1)=1$ and $\tau_{1}\big(\pi_{+}(aa^*)\big)=\tfrac{1}{2\pi}
\int_{0}^{2\pi}1\,d\theta=1$, while $\tau_{0}$ is not since $\tau_{0}(1)=0$ and
\begin{align}
    \label{eq:tau0}
\tau_{0}\big(\pi_{\pm}(aa^*)\big)=\lim_{N\rightarrow
\infty}\sum_{n=0}^\infty (1-q^{2n}) -(N+1)=-\tfrac{1}{1-q^2},
\end{align}
so, because of the shift, the substitution $a\leftrightarrow a^*$ gives
\begin{align}
    \label{nontrace}
\tau_{0}\big(\pi_{\pm}(a^*a)\big)=q^2 \,
\tau_{0}\big(\pi_{\pm}(aa^*)\big).
\end{align}

\subsection{The tadpole}

\begin{lemma}
For $SU_q(2)$, the condition of the vanishing tadpole (see
\cite{ConnesMarcolli}) is not satisfied.
\end{lemma}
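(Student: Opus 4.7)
The plan is to exhibit an explicit one-form $A \in \Omega^1_{\DD}(\A)$ for which $\ncint A\DD^{-1} \neq 0$, so that by \eqref{tadpole0} one has $\Tad_{\DD+A}(0) = -\ncint A\DD^{-1} \neq 0$, and the vanishing tadpole condition fails.

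The first step is a reduction of $\ncint a[\DD,b]\DD^{-1}$ to a computation involving only $|\DD|$ and $F$. Writing $\DD = F|\DD|$ with $F^2 = 1$, $[F,|\DD|]=0$, and exploiting \eqref{Fcommutes} together with the fact that $\pi(x)-\underline{\pi}(x) \in \K_q \subset OP^{-\infty}$ for all $x \in \A$, one gets $[F,\pi(b)]\in OP^{-\infty}$. A short manipulation then gives the identity
\begin{equation*}
[\DD,b] \;=\; \delta(b)\,F \mod OP^{-\infty}.
\end{equation*}
Combined with $\DD^{-1} = |\DD|^{-1}F$ and $F^2=1$, this yields, for all $a,b\in\A$,
\begin{equation*}
\ncint a[\DD,b]\,\DD^{-1} \;=\; \ncint a\,\delta(b)\,|\DD|^{-1}.
\end{equation*}

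Next, I would specialize to the one-form $A = b^*[\DD,b]$. By Lemma \ref{commutateur}(ii), $\delta(b) = b_+ - b_-$, so that
\begin{equation*}
b^*\,\delta(b) \;=\; (b_+^* + b_-^*)(b_+ - b_-) \;=\; b_+^*b_+ - b_+^*b_- + b_-^*b_+ - b_-^*b_-,
\end{equation*}
whose degree-zero component for the $\Z$-grading on $X$ is
\begin{equation*}
\bigl(b^*\delta(b)\bigr)^{\circ} \;=\; b_+^*b_+ - b_-^*b_-.
\end{equation*}
Applying the $*$-homomorphism $r$ of \eqref{eq:r}, one finds
\begin{equation*}
r\bigl((b^*\delta(b))^{\circ}\bigr) \;=\; \pi_+(a^*a)\otimes\pi_-(b^*b)\;-\;\pi_+(b^*b)\otimes\pi_-(aa^*).
\end{equation*}

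Finally, Theorem \ref{Theo} supplies the formula
\begin{equation*}
\ncint T\,|\DD|^{-1} \;=\; \Bigl(2\,\tau_0\otimes\tau_0 \;-\; \tfrac{1}{2}\,\tau_1\otimes\tau_1\Bigr)\bigl(r(T)^{\circ}\bigr).
\end{equation*}
Since $\tau_1(\pi_\pm(b^*b))=0$ (the symbol of $b$ vanishes), the $\tau_1\otimes\tau_1$ contribution is zero. A direct computation gives $\tau_0(\pi_\pm(b^*b)) = 1/(1-q^2)$, while \eqref{eq:tau0} and \eqref{nontrace} yield $\tau_0(\pi_\pm(aa^*)) = -1/(1-q^2)$ and $\tau_0(\pi_\pm(a^*a)) = -q^2/(1-q^2)$. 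Substituting:
\begin{equation*}
\ncint b^*[\DD,b]\,\DD^{-1} \;=\; 2\left(-\tfrac{q^2}{1-q^2}\cdot\tfrac{1}{1-q^2}\;-\;\tfrac{1}{1-q^2}\cdot\bigl(-\tfrac{1}{1-q^2}\bigr)\right) \;=\; \tfrac{2}{1-q^2} \;\neq\; 0,
\end{equation*}
which contradicts the vanishing tadpole hypothesis. The only delicate step is justifying the reduction $[\DD,b]\equiv \delta(b)F$ modulo $OP^{-\infty}$ so as to extract a trace on $\B$ to which Theorem \ref{Theo} applies; everything else is a bookkeeping exercise with the explicit residue formulas.
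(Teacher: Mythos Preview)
Your proof is correct and follows essentially the same approach as the paper: reduce $\ncint a[\DD,b]\DD^{-1}$ to $\ncint a\,\delta(b)\,|\DD|^{-1}$ using $[F,\A]\subset OP^{-\infty}$, extract the degree-zero part, and apply the residue formula of Theorem~\ref{Theo}. The only cosmetic difference is that the paper works with the one-form $b\,[\DD,b^*]$ (degree-zero part $-b_+b_+^*+b_-b_-^*$) whereas you use $b^*[\DD,b]$ (degree-zero part $b_+^*b_+-b_-^*b_-$); both computations land on the same nonzero value $\tfrac{2}{1-q^2}$, and the paper in fact records your case in the remark immediately following its proof.
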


\begin{proof}
For example, an explicit calculation gives
$ \ncint \pi(b) [\DD, \pi(b^*)] \DD^{-1} = \tfrac{2}{1 - q^2} $:

Let $x,y \in \ul \pi(\A)$. Since $[F,x]=0$, we have
$$ \ncint x [\DD,y] \DD^{-1} = \ncint x \delta(y) |\DD|^{-1}
= \tau'\big(r(x \delta(y)\big)^0) $$
where $\tau' :=2 \, \tau_0 \ox \tau_0 - \tfrac{1}{2} \, \tau_1 \ox
\tau_1$.

By Lemma \ref{commutateur}, $\ul \pi(b) \delta \big(\ul \pi(b^*)\big)
= (b_+ + b_-)\big((b_-)^* -
(b_+)^*\big)=-b_{+}{b_{+}}^*
+b_{-}{b_{-}}^*+b_{+}{b_{-}}^*-b_{-}{b_{+}}^*$.
Since only the first two terms have degree $0$, we get, using the
formulae from
Theorem \ref{Theo}
\begin{align*}
\tau'\big(r(-b_{+}{b_{+}}^*)\big)&=-\tau'\big(\pi_{+}(aa^*)\otimes
\pi_{-}(bb^*)\big)\\
&=-2\tau_{0}\big(\pi_{+}(aa^*)\big)
\tau_{0}\big(\pi_{-}(bb^*)\big)+\tfrac12\tau_{1}\big(\pi_{+}(aa^*)\big)
\tau_{1}\big(\pi_{-}(bb^*)\big)
\end{align*}
and 
$\tau_{1}\big(\pi_{-}(bb^*)\big)=0$. Similarly, using \eqref{nontrace}
$$
\tau'\big(r(b_{-}{b_{-}}^*)\big)=2\tau_{0}\big(\pi_{+}(bb^*)\big)
\tau_{0}\big(\pi_{-}(a^*a)\big)=
2q^2\tau_{0}\big(\pi_{-}(aa^*)\big)\tau_{0}\big(\pi_{+}(bb^*\big).
$$
Since $\tau_{0}\big(\pi_{\pm}(bb^*)\big)=\Tr\big(\pi_{\pm}(bb^*)\big)=
\sum_{n=0}^\infty q^{2n}=\tfrac{1}{1-q^2}$ and \eqref{eq:tau0},
\begin{align*}
\ncint \pi(b) [\DD,\pi(b^*)] \DD^{-1} =
2\tfrac{1}{1-q^2}\tfrac{1}{1-q^2}
+2q^2 \tfrac{-1}{1-q^2}  \tfrac{1}{1-q^2}
= \tfrac{2}{1-q^2}.
\tag*{\qed}
\end{align*}
\hideqed
\end{proof}
In particular the pairing of the tadpole cyclic cocycle
$\phi_{1}$ with the generator of $K_{1}$-group is nontrivial:

\begin{remark} Other examples: with the shorthand $x$ instead of
$\piappr(x)$,
\begin{align*}
&(\tau_1\otimes \tau_1)\, r\big(a\delta(a^*)^\circ \big) =-1 ,&&
(\tau_1\otimes \tau_1) r\big(a^*\delta(a)^\circ\big) =1,\\
&(\tau_0\otimes\tau_0)\, r\big(a\delta(a^*)^\circ \big) =
\tfrac{1}{q^2-1}\,, &&
(\tau_0\otimes\tau_0)\, r\big(a^*\delta(a)^\circ\big) =
\tfrac{q^2}{q^2-1}\,, \\
&\ncint a\delta(a^*)|\DD|^{-1} = \tfrac{q^2+3}{2(q^2-1)}\,,&&
\ncint a^*\delta(a)|\DD|^{-1} =  \tfrac{3q^2+1}{2(q^2-1)}\,,\\
&\ncint b\delta(b) |\DD|^{-1} =0,&&  \ncint b^*\delta(b^*)
|\DD|^{-1}=0,\\
&\ncint b\delta(b^*) |\DD|^{-1} =\tfrac{-2}{q^2-1}\,,&&  \ncint
b^*\delta(b) |\DD|^{-1}=\tfrac{-2}{q^2-1}\,.
\end{align*}

In particular, $N\phi_1$ does not vanish on 1-forms since
$\int_{N\phi_{1}}ada^*=N\phi_1(a,a^*)=-1$.
\end{remark}

Let U be the canonical generator of the $K_{1}(\A)$-group,
$U=\genfrac{(}{)}{0pt}{1}{a \quad \,\,\,\,\,\,b}{-qb^* \quad a^*}$
acting on $\H \ox \C^2$. Then for $A_{U}:=\sum_{k,l=1}^2
\pi(U_{kl})\,d\pi({U^*}_{kl})$, using above remark,
$\int_{\phi_{1}}A_{U}=-2$ as obtained in \cite[page 391]{SDLSV}: in
fact, with $P:=\tfrac 12 (1+F)$,
$$\psi_{1}(U,U^*):=2\sum_{k,l}\ncint U_{kl}\delta(U^*_{kl}) P\vert
\DD\vert^{-1}
-\ncint U_{kl}\delta^2(U^*_{kl}) P\vert \DD\vert^{-2}
+\tfrac 23 \ncint U_{kl}\delta^3(U^*_{kl}) P\vert \DD\vert^{-3}$$
satisfies $\psi_{1}(U,U^*)=2\sum_{k,l}\ncint U_{kl}\delta(U^*_{kl})
P\vert
\DD\vert^{-1}=\int_{\phi_{1}}A_{U}$.

\section{Reality operator and spectral action on $SU_q(2)$}

\subsection{Spectral action in dimension 3 with $[F,\A]\in
OP^{-\infty}$} 
Let $(\A,\H,\DD)$ a be real spectral triple of dimension 3. Assume 
 that $[F,\A]\in OP^{-\infty}$, where $F:=\DD
|\DD|^{-1}$ (we suppose $\DD$ invertible).
Let $\Abb$ be a selfadjoint one form, so $\Abb$ is  of the form
$\Abb=\sum_{i}a_i d b_i$ where
$a_{i},b_{i} \in \A$.

Thus, $\Abb \simeq AF \mod OP^{-\infty}$ where
$A:=\sum_{i}a_i\delta(b_i)$ is the $\delta$-one-form associated to
$\Abb$. Note that $A$ and
$F$ commute modulo $OP^{-\infty}$.

We define
\begin{align*}
D_\Abb &:= \DD_\Abb + P_\Abb,
\quad P_\Abb \text{ the projection on } \Ker \DD_\Abb,\\
\DD_\Abb&:= \DD+\wt \Abb, \quad \wt \Abb := \Abb + J\Abb J^{-1}.
\end{align*}

\begin{theorem}
\label{coeffsASJ}
The coefficients of the full  spectral action (with reality operator)
on any real spectral triple $(\A,\H,\DD)$ of dimension 3 such that
$[F,\A]\in OP^{-\infty}$ are
\begin{align*}
&(i) \hspace{1cm} \ncint \vert D_{\Abb}\vert^{-3}  = \ncint
|\DD|^{-3}.\\
&(ii) \hspace{.9cm} \ncint \vert D_{\Abb}\vert^{-2}  = \ncint
|\DD|^{-2} - 4\ncint A|\DD|^{-3}.\\
&(iii) \hspace{.8cm}\ncint \vert D_{\Abb}\vert^{-1}  = \ncint
|\DD|^{-1}-2\ncint A|\DD|^{-2} +2\ncint A^2 |\DD|^{-3}
+2\ncint AJAJ^{-1}|\DD|^{-3}  .\\
&(iv)  \hspace{.9cm} \zeta_{D_{\Abb}}(0)  = \zeta_{D}(0)-2\ncint
A|\DD|^{-1} +
\ncint A(A+JAJ^{-1})|\DD|^{-2} +  \ncint
\delta(A)(A+JAJ^{-1})|\DD|^{-3}\\
&\hspace{4cm}-\tfrac{2}{3}\ncint
A^3|\DD|^{-3} - 2 \ncint A^2
JAJ^{-1}|\DD|^{-3}.
\end{align*}
\end{theorem}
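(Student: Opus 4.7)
The plan is to apply the general Chapter 1 formulas (Proposition \ref{invariance1}, Lemma \ref{residus-particuliers}, and the constant-term formula \eqref{termconstanttilde}) to dimension $n=3$, exploiting the hypothesis $[F,\A]\subset OP^{-\infty}$ to simplify the pseudodifferential computations. The first step is a preliminary reduction: writing $[\DD,b_i]=F\delta(b_i)+[F,b_i]|\DD|$ and using $[F,\A]\subset OP^{-\infty}$, one checks $\Abb\simeq AF$ modulo $OP^{-\infty}$. In dimension $3$ we have $\eps=+1$, so $J\DD=\DD J$ and $JFJ^{-1}=F$; hence $\wt\Abb\simeq A'F$ mod $OP^{-\infty}$ with $A':=A+JAJ^{-1}$. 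Combined with $\DD^{-1}=F|\DD|^{-1}$ and $F^2=\mathrm{Id}$ on $(\Ker\DD)^{\perp}$, this gives the key identity $\wt\Abb\,\DD^{-1}\simeq A'|\DD|^{-1}\mod OP^{-\infty}$. Throughout, we use that $\ncint$ vanishes on $OP^k$ for $k<-3$ (since $Sd=\{1,2,3\}$), the trace property of $\ncint$, and Lemma \ref{adjoint} together with Corollary \ref{reel} to identify $\ncint JAJ^{-1}|\DD|^{-k}=\ncint A|\DD|^{-k}$ via reality.

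Assertion (i) is immediate from Proposition \ref{invariance1}. For (ii), Lemma \ref{residus-particuliers}$(i)$ with $n=3$ gives $\ncint|D_\Abb|^{-2}=\ncint|\DD|^{-2}-\ncint X|\DD|^{-4}$ for $X=\wt\Abb\DD+\DD\wt\Abb+\wt\Abb^2$. The term $\wt\Abb^2|\DD|^{-4}\in OP^{-4}$ contributes nothing. The reduction $\wt\Abb\simeq A'F$ yields $\wt\Abb\DD\simeq A'|\DD|$ and $\DD\wt\Abb\simeq A'|\DD|+\delta(A')$ mod $OP^{-\infty}$; the $\delta(A')|\DD|^{-4}$ piece integrates to zero by the trace property, so $\ncint X|\DD|^{-4}=2\ncint A'|\DD|^{-3}=4\ncint A|\DD|^{-3}$. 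For (iii), Lemma \ref{residus-particuliers}$(ii)$ with $n=3$ gives an expression involving $\ncint X|\DD|^{-3}$ and $\ncint X^2|\DD|^{-5}$. The first yields $4\ncint A|\DD|^{-2}+\ncint{A'}^2|\DD|^{-3}$. For the second, only the summand $4A'|\DD|A'|\DD|$ of $X^2$ survives modulo $OP^{-4}$; after commuting one $|\DD|$ through $A'$, its leading part is $4{A'}^2|\DD|^2$, giving $\ncint X^2|\DD|^{-5}=4\ncint{A'}^2|\DD|^{-3}$. Expanding ${A'}^2$, using reality to identify $\ncint(JAJ^{-1})^2|\DD|^{-3}=\ncint A^2|\DD|^{-3}$, and cyclicity for the cross terms, one arrives at the announced formula.

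For (iv), the starting point is \eqref{termconstanttilde} specialized to $n=3$: the $q=1$ term equals $-2\ncint A|\DD|^{-1}$ by reality. The $q=3$ term reduces, modulo $OP^{-4}$ contributions (which vanish under $\ncint$), to $\ncint {A'}^3|\DD|^{-3}$; expanding ${A'}^3=(A+JAJ^{-1})^3$ and applying cyclic and $J$-symmetries produces $-\tfrac23\ncint A^3|\DD|^{-3}-2\ncint A^2JAJ^{-1}|\DD|^{-3}$. The heart of the computation is the $q=2$ term $\tfrac12\ncint(A'|\DD|^{-1})^2$: expand as the sum of four pieces $T_{11},T_{12},T_{21},T_{22}$. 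Using $|\DD|^{-1}B=B|\DD|^{-1}-|\DD|^{-1}\delta(B)|\DD|^{-1}$, each cross term picks up a subprincipal correction at order $OP^{-3}$. The diagonal $T_{11}+T_{22}$ gives $2\ncint A^2|\DD|^{-2}$ by reality, while the cross part $T_{12}+T_{21}$ expands to $2\ncint AJAJ^{-1}|\DD|^{-2}+2\ncint\delta(A)JAJ^{-1}|\DD|^{-3}$ after invoking $\ncint\delta(AJAJ^{-1})|\DD|^{-3}=0$ and $\ncint\delta(A)A|\DD|^{-3}=0$ (both consequences of $\ncint\circ\delta=0$ on operators of the right order). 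Recognizing $\ncint\delta(A)JAJ^{-1}|\DD|^{-3}=\ncint\delta(A)(A+JAJ^{-1})|\DD|^{-3}$ yields the announced coefficient.

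The main obstacle is the bookkeeping for (iv): the apparent identity $\ncint A'\delta(A')|\DD|^{-3}=0$ (obtained globally via $\delta({A'}^2)$) conceals the fact that the two individual cross terms $\ncint AJAJ^{-1}|\DD|^{-2}$ and $\ncint JAJ^{-1}A|\DD|^{-2}$ are \emph{not} equal: they differ precisely by a $\delta$-correction of order $-3$ arising from $[|\DD|^{-2},JAJ^{-1}]\simeq-2\delta(JAJ^{-1})|\DD|^{-3}$. One must therefore perform the commutator expansions before invoking the trace property; a cavalier use of cyclicity at the top order would erase the $\ncint\delta(A)(A+JAJ^{-1})|\DD|^{-3}$ term entirely. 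Once this subtlety is understood, the whole calculation becomes a disciplined application of the pseudodifferential calculus of Chapter 1.
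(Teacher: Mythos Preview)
Your approach is correct and follows the same overall strategy as the paper: apply Proposition~\ref{invariance1} and Lemma~\ref{residus-particuliers} for $(i)$--$(iii)$, and formula~\eqref{termconstanttilde} for $(iv)$, using the reduction $\wt\Abb\simeq A'F$ with $A'=A+JAJ^{-1}$ throughout.

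The one place you diverge from the paper is the $q=2$ term in $(iv)$, and there your route is unnecessarily circuitous. The paper simply observes, \emph{before} any commutator expansion, that cyclicity of $\ncint$ gives $T_{21}=\ncint JAJ^{-1}|\DD|^{-1}A|\DD|^{-1}=\ncint A|\DD|^{-1}JAJ^{-1}|\DD|^{-1}=T_{12}$ exactly, and $J$-conjugation plus reality gives $T_{22}=T_{11}$. Hence the whole sum is $2T_{11}+2T_{12}$, and a single commutator expansion of each produces $\ncint A(A+JAJ^{-1})|\DD|^{-2}+\ncint\delta(A)(A+JAJ^{-1})|\DD|^{-3}$ directly, with the $\ncint\delta(A)A|\DD|^{-3}$ piece appearing naturally. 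You instead first drop this piece from $T_{11}+T_{22}$ by invoking $\ncint\delta(A)A|\DD|^{-3}=0$, then recover it at the end by adding zero back. Your ``main obstacle'' paragraph thus overstates the danger: there is no risk of losing the $\delta$-correction if one applies cyclicity to the full operators $T_{12}$, $T_{21}$ rather than to their expanded pieces, which is exactly what the paper does.
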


\begin{proof}
$(i)$ We apply Proposition \ref{invariance1}.

$(ii)$ By Lemma \ref{residus-particuliers}, we have  $\ncint \vert
D_{\Abb}\vert^{-2}=\ncint |\DD|^{-2}-\ncint (\wt \Abb \DD+\DD \wt
\Abb +\wt \Abb^2)|\DD|^{-4}$. By the trace property of the
noncommutative integral and the fact that
$\wt \Abb^2 |\DD|^{-4}$ is trace-class, we get $\ncint \vert
D_{\Abb}\vert^{-2}= \ncint |\DD|^{-2}-2\ncint \wt \Abb \DD
|\DD|^{-4}=\ncint |\DD|^{-2}-4\ncint \Abb \DD |\DD|^{-4}$. Since
$\Abb \DD \sim A|\DD| \mod OP^{-\infty}$, we get the result.

$(iii)$ By Lemma \ref{residus-particuliers} $(ii)$, we have
$$\ncint \vert
D_{\Abb}\vert^{-1}=\ncint |\DD|^{-1}-\half \ncint (\wt \Abb \DD+\DD
\wt \Abb +\wt \Abb^2)|\DD|^{-3} + \tfrac{3}{8}\ncint (\wt \Abb
\DD+\DD \wt \Abb +\wt \Abb^2)^2 |\DD|^{-5}.$$
Following arguments of $(ii)$, we get
\begin{align*}
\ncint (\wt \Abb \DD+\DD \wt \Abb +\wt \Abb^2)|\DD|^{-3}&= 4\ncint
A|\DD|^{-2} +2 \ncint A^2 |\DD|^{-3} + 2 \ncint AJAJ^{-1}
|\DD|^{-3},\\
\ncint (\wt \Abb \DD+\DD \wt \Abb +\wt \Abb^2)^2|\DD|^{-5}&=8\ncint
A^2 |\DD|^{-3} + 8\ncint AJAJ^{-1}|\DD|^{-3},
\end{align*}
and the result follows.

$(iv)$ By \eqref{termconstanttilde},
$\zeta_{D_{\Abb}}(0)  =  \sum_{j=1}^{3} \tfrac{(-1)^j}{j}\ncint (\wt
\Abb\DD^{-1})^j\, .$

Moreover, the following holds:
$\ncint \wt \Abb \DD^{-1} = 2\ncint A|\DD|^{-1}$ and
$\ncint  (\wt\Abb \DD^{-1})^2= 2\ncint (A|\DD|^{-1})^2 + 2\ncint
A|\DD|^{-1}JAJ^{-1}|\DD|^{-1}$.
Since $\delta(A)\in OP^{0}$, we can check that $\ncint
(A|\DD|^{-1})^2 = \ncint A^2 |\DD|^{-2}+\ncint \delta(A)A|\DD|^{-3}$
and, with the same argument, that $\ncint
A|\DD|^{-1}JAJ^{-1}|\DD|^{-1}=\ncint A JAJ^{-1}|\DD|^{-2}+\ncint
\delta(A)
JAJ^{-1} |\DD|^{-3}$. Thus, we get
\begin{equation}\label{ncintAD2}
\ncint  (\wt\Abb \DD^{-1})^2 = 2\ncint A(A+JAJ^{-1})|\DD|^{-2} +2
\ncint \delta(A)(A+JAJ^{-1})|\DD|^{-3}.
\end{equation}
The third term to be computed is
$$
\ncint  (\wt\Abb \DD^{-1})^3= 2\ncint (A|\DD|^{-1})^3 + 4\ncint
(A|\DD|^{-1})^2 JAJ^{-1}|\DD|^{-1}+2\ncint
A|\DD|^{-1}JAJ^{-1}|\DD|^{-1} A|\DD|^{-1}.
$$
Any operator in $OP^{-4}$ being trace-class here, we get
\begin{equation}
    \label{ncintAD3}
\ncint  (\wt\Abb \DD^{-1})^3 = 2\ncint A^3|\DD|^{-3} + 4\ncint A^2
JAJ^{-1}|\DD|^{-3}+2\ncint AJAJ^{-1} A|\DD|^{-3}.
\end{equation}
Since $\ncint AJAJ^{-1} A|\DD|^{-3}=\ncint A^2
JAJ^{-1}|\DD|^{-3}$
by trace property and the fact that $\delta(A)\in OP^{0}$,
the result follows then from (\ref{ncintAD2}) and (\ref{ncintAD3}).
\end{proof}

\begin{corollary} For the spectral action of $\Abb$ without
the reality operator (i.e. $\DD_{\Abb}=\DD+\Abb$), we get
\begin{align*}
&\ncint \vert D_{\Abb}\vert^{-2}  = \ncint |\DD|^{-2} - 2\ncint
A|\DD|^{-3},\\
&\ncint \vert D_{\Abb}\vert^{-1}  = \ncint
|\DD|^{-1}-\ncint A|\DD|^{-2} +\ncint A^2 |\DD|^{-3},\\
&\zeta_{D_{\Abb}}(0)  = \zeta_{\DD}(0) -\ncint A|\DD|^{-1} +\half
\ncint A^2|\DD|^{-2} +  \half \ncint
\delta(A)A|\DD|^{-3}-\tfrac{1}{3}\ncint
A^3|\DD|^{-3}.
\end{align*}
\end{corollary}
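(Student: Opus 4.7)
The plan is to mirror the proof of Theorem~\ref{coeffsASJ} line by line, simply dropping the symmetrization: instead of $\wt\Abb=\Abb+J\Abb J^{-1}$, we use $\Abb$ itself, so the auxiliary operator becomes $X:=\DD_\Abb^2-\DD^2=\Abb\DD+\DD\Abb+\Abb^2\in \Psi(\A)^1$. The three ingredients are exactly the same as before: Lemma~\ref{residus-particuliers}$(i)$--$(ii)$ (applied with $n=3$), the trace property of $\ncint$ on a simple dimension spectrum (Proposition~\ref{tracenc}), and the expansion \eqref{termconstanttilde}, whose derivation only uses that $\Abb$ is a bounded perturbation of $\DD$, hence still applies with $\wt A$ replaced by $\Abb$. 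Throughout, we use the approximation $\Abb\simeq AF\bmod OP^{-\infty}$, which follows from $[F,\A]\subset OP^{-\infty}$, together with $F|\DD|=\DD$, so that $\Abb\DD\simeq A|\DD|$ and $\Abb\DD^{-1}\simeq A|\DD|^{-1}$ modulo smoothing operators.

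For the second formula, Lemma~\ref{residus-particuliers}$(i)$ yields $\ncint|D_\Abb|^{-2}=\ncint|\DD|^{-2}-\ncint X|\DD|^{-4}$. The term $\ncint\Abb^2|\DD|^{-4}$ is the residue of an $OP^{-4}$ operator and so vanishes, while $\ncint\DD\Abb|\DD|^{-4}=\ncint\Abb\DD|\DD|^{-4}=\ncint A|\DD|^{-3}$ by the trace property and $[\DD,|\DD|^{-4}]=0$. For the third formula, Lemma~\ref{residus-particuliers}$(ii)$ gives $\ncint|D_\Abb|^{-1}=\ncint|\DD|^{-1}-\tfrac12\ncint X|\DD|^{-3}+\tfrac38\ncint X^2|\DD|^{-5}$; one checks $\ncint X|\DD|^{-3}=2\ncint A|\DD|^{-2}+\ncint A^2|\DD|^{-3}$, and in $\ncint X^2|\DD|^{-5}$ only the four homogeneous pieces of $(\Abb\DD+\DD\Abb)^2$ survive at order $OP^{-3}$, each contributing $\ncint A^2|\DD|^{-3}$, for a total of $4\ncint A^2|\DD|^{-3}$. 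Combining the pieces produces the claimed coefficients $-1$ and $+1$.

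For the constant term, \eqref{termconstanttilde} reads $\zeta_{D_\Abb}(0)-\zeta_\DD(0)=\sum_{q=1}^{3}\tfrac{(-1)^q}{q}\ncint(\Abb\DD^{-1})^q$. Substituting $\Abb\DD^{-1}\simeq A|\DD|^{-1}$ and using the commutator identity $[|\DD|^{-1},A]=-|\DD|^{-1}\delta(A)|\DD|^{-1}$ (already invoked in the proof of Theorem~\ref{coeffsASJ}$(iv)$), one obtains $\ncint(A|\DD|^{-1})^2=\ncint A^2|\DD|^{-2}+\ncint\delta(A)A|\DD|^{-3}$ after discarding an $OP^{-4}$ remainder, and $\ncint(A|\DD|^{-1})^3=\ncint A^3|\DD|^{-3}$ since here all commutator corrections lie in $OP^{-4}$. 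Assembling these three contributions gives the stated expansion.

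The only potential obstacle is the routine bookkeeping of which commutator corrections sit in $OP^{-4}$ and therefore contribute nothing to $\ncint$ on a $3$-dimensional triple; this is handled exactly as in the proof of Theorem~\ref{coeffsASJ}. Conceptually, the formulas of the corollary can be read off from those of Theorem~\ref{coeffsASJ} by deleting every term containing $JAJ^{-1}$ and halving the remaining numerical coefficients, which reflects the fact that each occurrence of $\wt\Abb$ contributes a factor $2$ to the linear part and produces the mixed terms in the quadratic and cubic parts.
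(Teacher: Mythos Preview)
Your proof is correct and follows exactly the approach implicit in the paper: the Corollary is stated without proof immediately after Theorem~\ref{coeffsASJ}, and is obtained precisely by rerunning that proof with $\Abb$ in place of $\wt\Abb$, using Lemma~\ref{residus-particuliers}, the trace property of $\ncint$, and the expansion~\eqref{termconstanttilde}. Your bookkeeping of the $OP^{-4}$ corrections and your final heuristic (drop the $JAJ^{-1}$ terms and halve the remaining coefficients) are both on the mark.
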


\subsection{Spectral action on $SU_q(2)$: main result}

On $SU_q(2)$, since $F$ commutes with $a_\pm$ and $b_\pm$, Theorem \ref{coeffsASJ} can be used for the spectral action computation.

Here is the main result of this section

\begin{theorem}
    \label{mainThmJ}
In the full spectral action \eqref{formuleaction}
(with the reality operator) of $SU_q(2)$ for a one-form $\Abb$ and
$A$ its associated $\delta$-one-form, the coefficients are:
\begin{align*}
&\ncint \vert D_{\Abb}\vert^{-3}  = 2\,, \\
& \ncint\vert D_{\Abb}\vert^{-2}  =  - 4\, \ncint A |\DD|^{-3},\\
&\ncint \vert D_{\Abb}\vert^{-1}  = -\half +2\big( \ncint
A^2|\DD|^{-3}- \ncint A|\DD|^{-2} \big)+ \big\vert\ncint
A|\DD|^{-3}\big\vert^2 ,\\
&\zeta_{D_{\Abb}}(0) =-2 \ncint A|\DD|^{-1} +\ncint A^2 |\DD|^{-2}
-\tfrac{2}{3} \ncint A^3 |\DD|^{-3}\\
&\hspace{3cm}+\overline{\ncint A |\DD|^{-3}}\big(\half\ncint
A|\DD|^{-2} -\ncint A^2 |\DD|^{-3}\big) +\half \ncint
A|\DD|^{-3}\overline{\ncint A|\DD|^{-2}}\, .
\end{align*}
\end{theorem}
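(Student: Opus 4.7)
The plan is to start from Theorem~\ref{coeffsASJ} and simplify each of its four expressions using the specific structure of $SU_q(2)$. This reduction is allowed because, by \eqref{Fcommutes}, $F$ commutes with each of $a_\pm,b_\pm$, and combined with Remark~\ref{pseudodiff} this gives $[F,\A]\subseteq OP^{-\infty}$, which is exactly the hypothesis needed for Theorem~\ref{coeffsASJ}. Thus every coefficient of the spectral action can be written in terms of $\ncint|\DD|^{-k}$, $\zeta_D(0)$, and a small family of noncommutative integrals built out of $A$, $JAJ^{-1}$, $\delta(A)$, and powers of $|\DD|^{-1}$.

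The first computational step is to evaluate the unperturbed data by applying Theorem~\ref{Theo} with $T=1$ (so that $r(1)^\circ=1\otimes 1$) and using $\tau_1(1)=1$, $\tau_0(1)=0$. This produces $\ncint|\DD|^{-3}=2$, $\ncint|\DD|^{-2}=0$, $\ncint|\DD|^{-1}=-\tfrac12$, while $\zeta_D(0)=0$ is already available from the analysis in \cite{SDLSV}. Substituting these values into Theorem~\ref{coeffsASJ}(i)--(iv) immediately yields the stated formula for $\ncint|D_\Abb|^{-3}$ and $\ncint|D_\Abb|^{-2}$, and reduces $\ncint|D_\Abb|^{-1}$ and $\zeta_{D_\Abb}(0)$ to expressions in which the only remaining $J$-dependence lies in the terms $\ncint AJAJ^{-1}|\DD|^{-k}$, $\ncint\delta(A)JAJ^{-1}|\DD|^{-3}$, and $\ncint A^2JAJ^{-1}|\DD|^{-3}$, together with the mixed piece $\ncint\delta(A)A|\DD|^{-3}$.

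The heart of the proof is the multiplicative identity announced in the introduction,
\[
\ncint A\,JBJ^{-1}\,|\DD|^{-3}=\tfrac12\ncint A|\DD|^{-3}\;\overline{\ncint B|\DD|^{-3}},
\]
for $\delta$-one-forms $A,B$, together with its natural companions at weights $-2$ and $-1$. Taking $B=A$ gives the term $|\ncint A|\DD|^{-3}|^2$ in $\ncint|D_\Abb|^{-1}$; taking $B=A$ against $\ncint(\,\cdot\,)|\DD|^{-2}$ produces $\tfrac12\overline{\ncint A|\DD|^{-3}}\ncint A|\DD|^{-2}+\tfrac12\ncint A|\DD|^{-3}\overline{\ncint A|\DD|^{-2}}$; and taking $A\leadsto A^2$ converts $-2\ncint A^2JAJ^{-1}|\DD|^{-3}$ into $-\overline{\ncint A|\DD|^{-3}}\ncint A^2|\DD|^{-3}$. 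This identity is the main obstacle: it must be proven by combining the factorized form of the top residue, which is the pure tensor $2(\tau_1\otimes\tau_1)\circ r$ from Theorem~\ref{Theo}, with the behavior of $r$ under $J$-conjugation and with Lemma~\ref{adjoint}; I plan to develop this systematically in Section~5 by exploiting a Poincaré--Birkhoff--Witt decomposition of $\delta$-one-forms in $X$ and tracking the $0$-degree part explicitly.

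Finally, it remains to dispose of the residual terms $\ncint\delta(A)A|\DD|^{-3}$ and $\ncint\delta(A)JAJ^{-1}|\DD|^{-3}$. For the second, the multiplicative identity applied with $B=A$ reduces it to a multiple of $\ncint\delta(A)|\DD|^{-3}$, which vanishes by cyclicity because $\delta(A)|\DD|^{-3}=[|\DD|,A]|\DD|^{-3}=|\DD|A|\DD|^{-3}-A|\DD|^{-2}$ and the trace property gives $\ncint|\DD|A|\DD|^{-3}=\ncint A|\DD|^{-2}$. For the first term, an analogous combination of cyclicity and the Leibniz identity $\delta(A^2)=\delta(A)A+A\delta(A)$ must be used in conjunction with the Section~5 calculus up to the residual ideal $\CR$ to produce $\ncint\delta(A)A|\DD|^{-3}=0$. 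Collecting the surviving pieces then yields the four displayed formulae. The multiplicative $J$-identity at weight $-3$ is the crucial and most delicate step; the weight $-2$ version is easier because $\ncint|\DD|^{-2}=0$ but still requires verification at the level of the mixed functional $\tau_1\otimes\tau_0+\tau_0\otimes\tau_1$.
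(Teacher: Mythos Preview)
Your plan is correct and follows the same three ingredients as the paper: apply Theorem~\ref{coeffsASJ}, insert the unperturbed values of Lemma~\ref{ncintD} (your computation via Theorem~\ref{Theo} with $T=1$ gives the same numbers), and reduce the $J$-dependent integrals via the factorization identities that the paper collects as Theorem~\ref{ncintJ}.

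One point deserves care. The multiplicative identity you announce,
\[
\ncint A\,JBJ^{-1}\,|\DD|^{-3}=\tfrac12\,\ncint A|\DD|^{-3}\;\overline{\ncint B|\DD|^{-3}},
\]
is stated (and proved in the paper, via the PBW computations of Lemmas~\ref{ncintLin}, \ref{ncintA2}, \ref{ncintJlem}) only for $\delta$-one-forms $A$, $B$. When you treat $\ncint\delta(A)JAJ^{-1}|\DD|^{-3}$ by ``applying the multiplicative identity with $B=A$'', you are inserting $\delta(A)$ in the first slot, and $\delta(A)$ is \emph{not} a $\delta$-one-form. The paper does not proceed this way: it establishes $\ncint\delta(A)JAJ^{-1}|\DD|^{-3}=0$ directly in Lemma~\ref{ncintJlem}$(iv)$--$(v)$ by PBW computation. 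Your approach can be rescued, since the extended $\tau_1$ on $\pi'_\pm(\A)\wh\pi_\pm(\A)$ is \emph{defined} multiplicatively, so the factorization at weight $-3$ actually holds for any element of the algebra $X$ in the first slot; but you should state and prove that extended version rather than invoke the $\delta$-one-form identity.

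Similarly, for $\ncint\delta(A)A|\DD|^{-3}=0$ you do not need the ideal $\CR$ at all. Your own cyclicity argument is enough: $\ncint\delta(A^2)|\DD|^{-3}=0$ by the trace property, hence $\ncint\delta(A)A|\DD|^{-3}+\ncint A\delta(A)|\DD|^{-3}=0$; and commuting $A$ past $|\DD|^{-3}$ costs an $OP^{-4}$ term which is trace-class, so the two summands are equal and hence both vanish. This is in fact cleaner than the paper's route, which proves the vanishing through the explicit PBW formula of Lemma~\ref{ncintA2}$(iv)$--$(v)$. Finally, no ``weight $-1$'' companion of the multiplicative identity is required: Theorem~\ref{coeffsASJ}$(iv)$ contains no $JAJ^{-1}$ term against $|\DD|^{-1}$.
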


In order to prove this theorem (in section 4.7), we will use a decomposition of
one-forms in the Poincar\'e-Birkhoff--Witt basis of $\A$ with an
extension of previous representations to operators like $T JT'
J^{-1}$ where $T$ and $T'$ are in $X$. 

\subsection{Balanced components and Poincar\'e--Birkhoff--Witt basis
of $\A$}
  \label{balancedcomponent}

Our objective is to compute all integrals in term of $A$ and the
computation will lead to functions of $A$ which capture certain
symmetries on $A$.

Let $\Abb=\sum_i \pi(x^i) d \pi(y^i) $ on $SU_q(2)$ be a one-form and
$A$ the associated $\delta$-one-form. The $x^i$ and $y^i$ are in $\A$
and as such
they can be uniquely written as finite sums $x^i=\sum_\a x^i_\a m^\a$
and $y^i = \sum_\beta y^i_\beta m^{\beta}$ where
$m^\a:=a^{\a_1}b^{\a_2}{b^*}^{\a_3}$ is the canonical monomial of
$\A$ with $\a,\beta \in \Z \times  \N \times  \N$ based on a fixed
Poincar\'e--Birkhoff--Witt type basis of $\A$.

\begin{remark}
Any one-form $\Abb=\sum_i \pi(x^i) d \pi(y^i) $ on $SU_q(2)$ is
characterized by a complex valued
matrix $A_{\a}^\beta=\sum_{i}x_\a^i\,y_\beta^i$ where $\a,\beta \in
\Z\times\N\times\N$. This matrix is such that
$$
A= A_{\a}^\beta\, M^{\a}_\beta
$$
where $M^{\a}_\beta:= \pi(m^\a)\delta \big(\pi(m^\beta) \big)$.

In the following, we note 
$$
\bar A:= \bar A^\beta_\a\, M^\a_\beta
$$
so for any $p\in \N$, $\ncint \bar A |\DD|^{-p} = \overline{ \ncint A
|\DD|^{-p}}$. 
\end{remark}
This presentation of one-forms is not unique modulo $OP^{-\infty}$
since, as we will see in section 5, 
$F = \sum_i x_i dy_i$ where $x_i,y_i \in\A$, thus for any generator
$z$,
$[F, z] = \sum_i x_i d( y_i z) - x_i y_i dz - z x_i dy_i = 0 \mod
OP^{-\infty}$.
We do not know however if this presentation is unique when the
$OP^{-\infty}$ part is taken into account.

\quad

The $\delta$-one-forms $M^\a_\beta$ are said to be {\itshape
canonical}. Any product of $n$ canonical $\delta$-one forms,
where $n\in \N^*$, is called a {\itshape canonical
$\delta^n$-one-form}. Thus, if $A$ is a $\delta$-one-form,
$A^n = (A^n)_{\bar \a}^{\bar \beta}\, M^{\bar\a}_{\bar \beta}$ where
$\bar \a=(\a,\a',\cdots,\a^{(n-1)})$,  $\bar
\beta=(\beta,\beta',\cdots,\beta^{(n-1)})$ are in $\Z^n\times \N^n
\times \N^n$,
$(A^n)_{\bar \a}^{\bar\beta} := A_{\a}^{\beta} \cdots
A_{\a^{(n-1)}}^{\beta^{(n-1)}}$ and $M^{\bar\a}_{\bar\beta}$ is the
canonical $\delta^n$-one form equal to $
M^{\a}_{\beta} \cdots M^{\a^{(n-1)}}_{\beta^{(n-1)}}$.

\begin{definition}
A canonical $\delta^n$-one-form is $a$-balanced if it is of the form
$$
a^{\a_1}\delta(a^{\beta_1})\cdots
a^{\a^{(n-1)}_1}\delta(a^{\beta^{(n-1)}_1})
$$
where $\sum_{i=0}^{n-1} \a^{(i)}_1+\beta^{(i)}_1 = 0$.

For any $\delta$-one-form $A$, the $a$-balanced components of $A^n$
are denoted $B_a(A^n)_{\bar \a}^{\bar\beta}$.
\end{definition}

Note that
$$
B_a(A)_{\bar \a}^{\bar\beta} = A_{-\beta_1 0 0}^{\beta_1 0 0}\,
\delta_{\a_1+\beta_1,0} \,\delta_{\a_2+\a_3+\beta_2+\beta_3,0}.
$$
\begin{definition}
A canonical $\delta^n$-one-form is balanced if it is of the form
$$
m^{\a}\delta(m^{\beta})\cdots m^{\a^{(n-1)}}\delta(m^{\beta^{(n-1)}})
$$
where $\sum_{i=0}^{n-1} \a^{(i)}_1+\beta^{(i)}_1 = 0$ and
$\sum_{i=0}^{n-1} \a^{(i)}_2+\beta^{(i)}_2  = \sum_{i=0}^{n-1}
\a^{(i)}_3+\beta^{(i)}_3$.

For any $\delta$-one-form $A$, the balanced components of $A^n$ are
denoted $B(A^n)_{\bar \a}^{\bar\beta}$.
\end{definition}
Note that
$$
B(A)_{\bar \a}^{\bar\beta} = A_{-\beta_1 \a_2 \a_3}^{\beta_1 \beta_2
\beta_3}\,\delta_{\a_1+\beta_1,0}\,
\delta_{\a_2+\beta_2,\a_3+\beta_3}.
$$
As we will show, a contribution to the $k^{th}$-coefficient in the
spectral action, is only brought by one-forms $\Abb$ such that $A^k$
is balanced (and even $a$-balanced
in the case $k=1$).

Note also that if $A$ is balanced, then $A^{k}$ for $k\geq 1$ is also
balanced, whereas the converse is false.

\subsection{The reality operator $J$ on $SU_q(2)$}
Let for any $n,p\in \N$,
\begin{align*}
 &q_n:=\sqrt{1-q^{2n}},
&q_{-n}&:= 0 \text{ if } n>0 ,\\
&q_{n,p}^\up := q_{n+1}\cdots q_{n+p}\,,
&q_{n,p}^\dn &:= q_n \cdots q_{n-(p-1)}\, ,
\end{align*}
with the convention $q_{n,0}^\up= q_{n,0}^\dn := 1$.  Thus, we have
the relations
\begin{align*}
&\pi_{\pm}(a^p)\ \eps_n = q_{n,p}^\up\  \eps_{n+p}\,, &
\pi_{\pm}({a^*}^p)&\ \eps_n  = q_{n,p}^\dn\ \eps_{n-p}\,,\\
&\pi_{\pm}(b^p)\  \eps_n  =( \pm
q^{n})^p\ \eps_n\,, &\pi_{\pm}({b^*}^p)&\  \eps_n=( \pm
q^{n})^p\ \eps_n \,,
\end{align*}
where $\eps_{k} := 0$ if $k<0$.

The sign of $x\in \R$ is denoted $\eta_x$. By convention, $a_j := a$,
$a_{\pm,j}:=a_{\pm}$ if $j\geq0$ and $a_j:=a^*$,
$a_{\pm,j}:=a_{\pm}^*$
if $j<0$.
Note that, with convention
$$
q_{n,p}^{\up_{\a_1}}:= q_{n,p}^\up
\text{ if } {\a_1}>0, \quad
q_{n,p}^{\up_{\a_1}}:= q_{n,p}^\dn \text{ if } {\a_1}<0
, \text{ and } q_{n,p}^{\up_0}:=1,
$$
we have for any ${\a_1} \in \Z$ and $p\leq
{\a_1}$,
$\pi_{\pm}(a_{\a_1}^p)\, \eps_n = q_{n,p}^{\up_{\a_1}}\
\eps_{n+\eta_{\a_1} p}$.

Recall that the reality operator $J$ is defined by
\begin{align*}
J  \, v^{j\up}_{m,l}= i^{2(m+l)-1}v^{j\up}_{2j-m,2j+1-l} \,,\qquad
J  \, v^{j\dn}_{m,l}= i^{-2(m+l)+1}v^{j\dn}_{2j-m,2j-1-l}\,,
\end{align*}
thus the real conjugate operators
$$
\wh a_\pm:=Ja_{\pm}J^{-1}, \quad \wh b_\pm:=J b_\pm J^{-1}
$$
satisfy
\begin{align*}
\wh a_+ \, v^j_{m,l} &:=
-q_{2j+1-m} \, \genfrac{(}{)}{0pt}{1}{q_{2j+2-l}\quad 0 \quad}
{\quad0 \qquad q_{2j-l}}  \, v^{j^+}_{m,l}\,, \quad
&\wh a_- \, v^j_{m,l} &:=
-q^{2j-m} \, \genfrac{(}{)}{0pt}{1}{q^{2j+2-l}\quad 0\quad}{\quad 0
\quad
q^{2j-l}} \, v^{j^-}_{m-1,l-1},\\
\wh b_+ \, v^j_{m,l} &:=
q_{2j+1-m} \, \genfrac{(}{)}{0pt}{1}{q^{2j+1-l} \quad 0 \quad}{\quad
0\quad q^{2j-1-l}} \, v^{j^+}_{m,l+1}\,, \quad
&\wh b_-  \,v^j_{m,l} &:=
-q^{2j-m} \,\genfrac{(}{)}{0pt}{1}{q_{2j+1-l}\quad 0 \quad}{\quad 0
\quad q_{2j-1-l}} \, v^{j^-}_{m-1,l}\,.
\end{align*}
So the real conjugate operator behaves differently on the
up and down part of the Hilbert space. The difference comes from the
fact that the index $l$ is not treated uniformly by $J$ on up and
down parts.

We denote $\wh X$ the algebra generated by $\set{\wh a_\pm, \wh
b_\pm}$, $\wt X$ the algebra generated by $\set{a_{\pm},b_{\pm}, \wh
a_\pm, \wh b_\pm}$ and
$\H':=\ell^2(\N) \otimes \ell^2(\Z)$ and we construct
two $*$-representations $\wh \pi_{\pm}$ of $\A$:

The representation $\wh \pi_+$ gives bounded operators
on $\H'$ while $\wh \pi_-$ represents $\A$ into $\B(\H'\otimes \C^2)$.

The representation $\wh \pi_+$ is defined on the generators by:
$$
\wh \pi_{+}(a)\, \eps_m \otimes \eps_{2j}:=q_{2j+1-m}\, \eps_m\otimes
\eps_{2j+1},
\qquad \wh \pi_{+} (b)\, \eps_m \otimes \eps_{2j}:=-q^{2j-m}\,
\eps_{m+1}\otimes \eps_{2j+1}
$$
while $\wh \pi_-$ is defined by:
\begin{align*}
 \wh \pi_{-} (a) \, \eps_l \otimes \eps_{2j}\otimes \eps_{\up\dn}&:=
-q_{2j+1\pm1-l}\, \eps_l \otimes \eps_{2j+1}\otimes \eps_{\up\dn}\,
,\\
  \wh \pi_{-} (b)\, \eps_l \otimes \eps_{2j}\otimes \eps_{\up\dn} &:=
-q^{2j\pm1-l}\, \eps_{l+1}\otimes \eps_{2j+1}\otimes \eps_{\up\dn}\, ,
 \end{align*}
where $\eps_{\up\dn}$ is the canonical basis of $\C^2$ and the $+$ in
$\pm$ corresponds to $\up$ in $\up\dn$.

The link between $\wh \pi_{\pm}$ and $\pi_{\pm}$ which explains the
notations about these intermediate objects
and the fact that $\wh \pi_{\pm}$ are representations on
different Hilbert spaces, is in the parallel between equations
\eqref{eq:r}, \eqref{eq:hat} and \eqref{eq:prime}.

Let us give immediately a few properties ($x_\beta$ equals $x$ if
the sign $\beta$ is positive and equals $x^*$ otherwise)
\begin{align*}
\wh \pi_+(a_\beta)^p \,\eps_m\otimes \eps_{2j} &=
q^{\up_\beta}_{2j-m,p} \, \eps_m\otimes\eps_{2j+\eta_\beta p}\,,\\
\wh \pi_-(a_\beta)^p\, \eps_l\otimes \eps_{2j}\otimes\eps_{\up\dn} &=
(-1)^p \, q^{\up_\beta}_{2j\pm1-l,p} \,
\eps_l\otimes\eps_{2j+\eta_\beta p}\otimes\eps_{\up\dn}\,,\\
\wh \pi_+(b_\beta)^p \,\eps_m\otimes \eps_{2j}&= (-1)^p\,
q^{(2j-m)p}\,
\eps_{m+\eta_\beta p}\otimes \eps_{2j+\eta_\beta p}\,, \\
\wh \pi_-(b_\beta)^p \,\eps_l\otimes \eps_{2j}\otimes\eps_{\up\dn}&=
(-1)^p\, q^{(2j\pm1-l)p}\, \eps_{l+\eta_\beta p}\otimes
\eps_{2j+\eta_\beta p}\otimes\eps_{\up\dn}\,.
\end{align*}

Note that the $\wh \pi_\pm$ representations still contain the shift
information, contrary to representations  $\pi_\pm$.
Moreover, $\wh \pi_\pm(b)\neq \wh \pi_\pm (b^*)$ while
$\pi_\pm(b) = \pi_\pm (b^*)$.

The operators $\wh a_\pm, \wh b_\pm$ are coded on $\H'\otimes
\H'\otimes \C^2$ as the correspondence
\begin{align}
    \label{eq:hat}
&\wh a_+ \longleftrightarrow  \wh \pi_+ (a) \otimes \wh \pi_-(a)
,\quad
&\wh a_- &\longleftrightarrow  -q \,\wh \pi_+(b^*) \otimes
\wh\pi_-(b^*) ,\nonumber\\
&\wh b_+\longleftrightarrow -\wh \pi_+(a)\otimes \wh \pi_-(b)
,\quad
&\wh b_-& \longleftrightarrow -\wh \pi_+(b^*) \otimes \wh
\pi_-(a^*).
\end{align}

We now set the following extension to $\B(\H')$ of  $\pi_+$ and to
$\B(\H'\otimes \C^2)$ of $\pi_-$ by
\begin{align*}
\pi'_{+}(a)&:=\pi_{+}(a)\otimes V, \quad \pi'_{+}(b) := \pi_+
(b)\otimes V \quad\text{ (V is the shift of }\ell^2(\Z)),\\
\pi'_{-}(a)&:=\pi_{-}(a)\otimes V \otimes 1_2, \quad \pi'_{-}(b) :=
\pi_- (b)\otimes V \otimes 1_2.
\end{align*}
So, we can define a canonical algebra morphism $\wt \rho\,$ from $\wt
X$
into the bounded operators on
$\H' \otimes\H'\otimes \C^2$.  This morphism is defined on the
generators part $\set{\wh a_{\pm},\wh b_{\pm}}$ of $\wt X$
by preceding correspondence and on the
generators part $\set{ a_{\pm}, b_{\pm}}$ by $-$see \eqref{eq:r}:
\begin{align}
    \label{eq:prime}
 & a_+ \longleftrightarrow   \pi'_+ (a) \otimes \pi'_-(a) ,\quad
 & a_- & \longleftrightarrow     -q \,\pi'_+(b^*) \otimes \pi'_-(b^*)
 ,\nonumber\\
 & b_+\longleftrightarrow       - \pi'_+(a)\otimes  \pi'_-(b) ,\quad
 & b_- & \longleftrightarrow     - \pi'_+(b^*) \otimes  \pi'_-(a^*).
\end{align}

We denote $S$ the canonical surjection from $\H'\otimes \H'\otimes\C^2$
onto $\H$. This surjection is associated to the parameters
restrictions on $m,j,l,j'$. In  particular, the index $j'$ associated
to the second $\ell^2(\N)$ in $\H'\otimes \H'\otimes \C^2$ is set to
be equal to $j$.
Any vector in $\H'\otimes \H'\otimes\C^2$ not
satisfying these restrictions is sent to 0 in $\H$.

Denote by $I$ the canonical injection
of $\H$ into $\H'\otimes \H'\otimes \C^2$ (the index $j$ is doubled).
Thus, $S\wt \rho(\cdot) I$ is the identity on
$\wt X$.

In the computation of residues of $\zeta_{\DD}^T$ functions,
we can therefore replace the operator $T$ by  $S\wt \rho(T) I$.

We now extend $\tau_0$ on $\pi'_\pm (\A) \wh \pi_\pm(\A)$: For $x,y
\in \A$, we set
\begin{align*}
\Tr_{N} \big(\pi'_+(x)\wh \pi_+(y) \big) & := \sum_{m=0}^{N}
\,\langle \eps_m\otimes \eps_{N} \,,\, \pi'_+(x)\wh
\pi_+(y)\,\eps_m\otimes \eps_{N} \rangle\, , \\
\Tr_{N}^{\up}  \big(\pi'_-(x)\wh \pi_-(y) \big) & := \sum_{l=0}^{N}
\,\langle \eps_l\otimes \eps_{N-1}\otimes \eps_{\up}\,,\,
\pi'_-(x)\wh \pi_-(y)\,\eps_l\otimes \eps_{N-1}\otimes
\eps_{\up}\rangle\, ,  \\
\Tr^\dn_{N}  \big(\pi'_-(x)\wh \pi_-(y) \big) & := \sum_{l=0}^{N}
\,\langle \eps_l\otimes \eps_{N+1}\otimes \eps_{\dn} \,,\,
\pi'_-(x)\wh \pi_-(y)\,\eps_l\otimes \eps_{N+1}\otimes\eps_{\dn}
\rangle \, .
\end{align*}
Actually, a computation on monomials of $\A$ shows that $\Tr^\dn_{N}
\big(\pi'_-(x)\wh \pi_-(y) \big)=
\Tr^\up_{N}  \big(\pi'_-(x)\wh \pi_-(y) \big)$.
For convenience, we shall note $\Tr_N\big(\pi'_-(x)\wh \pi_-(y) \big)$
this functional.

\begin{lemma}
    \label{tau0ext}
Let $x,y \in \A$. Then,

(i) $\tau_{0} \big(\pi'_\pm(x)\wh \pi_\pm(y) \big):=\lim_{N
\to \infty} U_N$ exists where $$U_{N}:=\Tr_{N}
\big( \pi'_\pm(x)\wh \pi_\pm(y)\big)
-(N+1)\tau_1\big(\pi_\pm(x)\big)\,  \tau_1\big(\pi_\pm(y)\big).$$

(ii) $U_N=\tau_{0} \big(\pi'_\pm(x)\wh \pi_\pm(y) \big) +
\mathcal{O}(N^{-k})$ for all $k>0$.
\end{lemma}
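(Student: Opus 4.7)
The plan is to exploit the explicit action of $\pi'_\pm$ and $\wh\pi_\pm$ on the basis $\eps_m\otimes\eps_N$ (and $\eps_l\otimes\eps_N\otimes\eps_{\up\dn}$ in the minus case) to compute the partial traces by hand. First, by multilinearity in $x$ and $y$, we may restrict to Poincar\'e--Birkhoff--Witt monomials $x=a^{\alpha_1}b^{\alpha_2}{b^*}^{\alpha_3}$ and $y=a^{\beta_1}b^{\beta_2}{b^*}^{\beta_3}$. Since $\pi'_\pm(a)=\pi_\pm(a)\otimes V$ and $V$ is the shift on $\ell^2(\Z)$, while $\wh\pi_\pm(a)$ shifts only the second tensor factor and $\wh\pi_\pm(b)$ shifts both, the operator $\pi'_\pm(x)\wh\pi_\pm(y)$ sends $\eps_m\otimes\eps_N$ (tensored with $\eps_{\up\dn}$ in the minus case) to a single basis vector times an explicit product of $q$-factors; the diagonal matrix element is then nonzero only if the combined shifts in the first and second indices both vanish, which yields linear Diophantine conditions on $(\alpha_i,\beta_i)$.

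Granting those shift conditions, the diagonal coefficient $c_{m,N}$ is an explicit product of terms of two types: $q_{N-m+\text{const},p}^{\up_{\bullet}}$ factors coming from the $a$-generators of $\wh\pi_\pm$, and monomial factors $(\pm q^{N-m+\text{const}})^{p}$ coming from the $b$-generators of $\wh\pi_\pm$. I would then split the analysis into two cases. If at least one of $\alpha_2,\alpha_3,\beta_2,\beta_3$ is positive in a way forcing a factor $q^{c(N-m)}$ with $c>0$, then $\sum_{m=0}^{N}|c_{m,N}|$ is uniformly bounded and converges to its infinite-sum limit up to $O(q^{2N})$; simultaneously one checks $\tau_1(\pi_\pm(x))\tau_1(\pi_\pm(y))=0$ via $\sigma(\pi_\pm(b))=\sigma(\pi_\pm(b^*))=0$, so $U_N$ already equals the trace up to exponentially small error. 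Otherwise $\alpha_2=\alpha_3=\beta_2=\beta_3=0$ and the shift conditions force $\alpha_1+\beta_1=0$, in which case $c_{m,N}$ is a product of $q_{N-m+\text{const}}$ factors whose expansion $q_n^2=1-q^{2n}$ produces a constant (equal to $1$) plus a remainder of order $q^{2(N-m)}$.

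Summing over $m$, the constant part contributes exactly $N+1$, which must coincide with $(N+1)\tau_1(\pi_\pm(x))\tau_1(\pi_\pm(y))$: this last equality is an immediate symbol computation, since $\tau_1(\pi_\pm(a^{\alpha_1}))=\frac{1}{2\pi}\int_0^{2\pi}e^{i\alpha_1\theta}\,d\theta=\delta_{\alpha_1,0}$ forces, jointly with the shift condition $\alpha_1+\beta_1=0$, the product $\tau_1(\pi_\pm(x))\tau_1(\pi_\pm(y))$ to be $1$ precisely when $\alpha_1=\beta_1=0$, and then the remainder sum $\sum_{m=0}^{N}O(q^{2(N-m)})$ converges as $N\to\infty$ to a finite limit up to $O(q^{2N})$. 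Defining $\tau_0\bigl(\pi'_\pm(x)\wh\pi_\pm(y)\bigr)$ as this limit proves (i), and the explicit bound on the tail gives $U_N-\tau_0=O(q^{2N})$, which is $O(N^{-k})$ for every $k>0$, proving (ii). The minus case is identical up to keeping track of the spurious $\eps_{\up\dn}$ index, and the equality $\Tr_N^\up=\Tr_N^\dn$ follows from the symmetry of the $q$-factors produced by $\wh\pi_-$ under $l\mapsto 2j-l$ combined with the shifts $2j\pm 1$.

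The main obstacle is bookkeeping: the number of cases for the shift conditions grows quickly with the six exponents $\alpha_i,\beta_i$, and one must verify that the leading constant produced by expanding $q_{N-m+c}^2=1-q^{2(N-m+c)}$ in the nontrivial case is genuinely equal to the symbol-theoretic prediction $\tau_1(\pi_\pm(x))\tau_1(\pi_\pm(y))$ uniformly over all allowed monomials, rather than just up to $O(1)$ corrections. The cleanest way is to factor $c_{m,N}$ as (something whose $N-m\to\infty$ limit equals the product of symbols)$\times$(correction in $1+O(q^{2(N-m)})$), and to recognize the limit as an iterated Fourier-mode integral over $S^1\times S^1$.
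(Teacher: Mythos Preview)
Your strategy---reduce to PBW monomials, compute the diagonal matrix element explicitly, and split into cases according to which geometric $q$-factors are present---is exactly the paper's. There is, however, a real oversight: your description of $c_{m,N}$ lists only the factors coming from $\wh\pi_\pm(y)$ (of type $q_{N-m+\mathrm{const}}$ and $q^{N-m}$), whereas $\pi'_\pm(x)$ also contributes factors of the mirrored type $q^{\up_{\alpha_1}}_{m+\mathrm{const},|\alpha_1|}$ and $(\pm q^{m+\mathrm{const}})^{\alpha_2+\alpha_3}$. Consequently your dichotomy has a hole: when $\beta_2=\beta_3=0$ but $\alpha_2+\alpha_3>0$ there is no $q^{c(N-m)}$ factor at all, only a $q^{(\alpha_2+\alpha_3)m}$ one, and the decay is in $m$ rather than $N-m$. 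The paper fixes this by introducing $\lambda:=\beta_2+\beta_3$, $\kappa:=\alpha_2+\alpha_3$ and splitting according to the sign of $\kappa-\lambda$ (plus the degenerate case $\kappa=\lambda\neq0$, handled by a Cauchy-product trick).

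Two smaller points. In your Case B the shift constraints $\alpha_1+\beta_2-\beta_3=0$ and $\alpha_2-\alpha_3+\beta_1=0$, combined with $\alpha_2=\alpha_3=\beta_2=\beta_3=0$, force $\alpha_1=0$ and $\beta_1=0$ \emph{separately}; hence $c_{m,N}\equiv1$ and there is nothing to expand, so the concern in your final paragraph does not arise. For part (ii), the uniform bound $U_N-\tau_0=O(q^{2N})$ is not correct as stated: when exactly one of $\lambda,\kappa$ vanishes the rate depends on the surviving $q_n$-product, and the paper obtains exponential decay only after expanding $q^{\up_{\beta_1}}_{2j-m,|\beta_1|}$ (or its $\alpha_1$-counterpart) as a series $\sum_p l_p\,q^{r_p}\,q^{2|p|_1(2j-m)}$ and estimating term by term. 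The decay is still exponential (hence $O(N^{-k})$ for all $k$), but this step needs more than your sketch provides.
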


\begin{proof}
$(i)$
We can suppose that $x$ and $y$ are monomials, since the result will
follow by linearity. We will give a proof for the case of the $\pi_+$
representations, the case $\pi_-$ being similar, with minor changes.

We have $\wh \pi_+ (y) = (\wh\pi_+ a_{\beta_1})^{|\beta_1|} \,
(\wh\pi_+ b)^{\beta_2} \,(\wh\pi_+ b^*)^{\beta_3} $.
A computation gives
$$
\wh \pi_+(y) \,\eps_m\otimes \eps_{2j} = (-1)^{\beta_2+\beta_3}\,
q^{(2j-m)(\beta_2+\beta_3)}\, q^{\up_{\beta_1}}_{2j-m,|\beta_1|}\,
\eps_{m-\beta_3+\beta_2}
\otimes \eps_{2j-\beta_3+\beta_2+\beta_1}
$$
and with the notation $t_{2j,m}:=\langle \eps_m\otimes \eps_{2j}
\,,\, \pi'_\pm(x)\wh \pi_\pm(y)\,\eps_m\otimes \eps_{2j} \rangle $
and $T_{2j}:=\sum_{m=0}^{2j} t_{2j,m}$, we get
\begin{align*}
t_{2j,m}&=(-1)^{\beta_2+\beta_3}\, q^{(2j-m)(\beta_2+\beta_3)}\,
q^{\up_{\beta_1}}_{2j-m,|\beta_1|}\,
q^{\up_{\a_1}}_{m-\beta_3+\beta_2,|\a_1|}\,
q^{(m+\beta_2-\beta_3)(\a_2+\a_3)}\, \delta_{\a_1+\beta_2-\beta_3,0}\\
& \hspace{2cm} \times  \delta_{-\a_3+\a_2+\beta_1,0}\\
 &=(-1)^{\a_1}\, q^{(2j-m)(\beta_2+\beta_3)+(m-\a_1)(\a_2+\a_3)}
 \, q^{\up_{\beta_1}}_{2j-m,|\beta_1|}\,
q^{\up_{\a_1}}_{m-\a_1,|\a_1|}\, \delta_{\a_1+\beta_2-\beta_3,0} \,
\delta_{\a_2-\a_3+\beta_1,0}\\
&=: f_{\a,\beta} \ q^{2j\la }\,  t'_{2j,m} =: f_{\a,\beta}
\ q^{2j\kappa }\,  t''_{2j,2j-m}
\end{align*}
where
\begin{align}
t'_{2j,m}&:=q^{m(\kappa-\la)}\, q^{\up_{\beta_1}}_{2j-m,|\beta_1|}\,
q^{\up_{\a_1}}_{m-\a_1,|\a_1|}\label{t'2jm}\,,\\
t''_{2j,m}&:=q^{m(\la-\kappa)}\, q^{\up_{\beta_1}}_{m,|\beta_1|}\,
q^{\up_{\a_1}}_{2j-m-\a_1,|\a_1|}\label{t''2jm}\,,
\end{align}
with $\la:=\beta_2+\beta_3 \geq0$ and $\kappa:=\a_2+\a_3\geq0$. We
will now prove that if $\la\neq\kappa$, then $(T_{2j})$
is a convergent sequence.
Suppose $\kappa>\la$. Let us note $U'_{2j}:= \sum_{m=0}^{2j}
t'_{2j,m}$.
Since the $t'_{2j,m}$ are positive and $t'_{2j+1,m}\geq t'_{2j,m}$
for all $j,m$,
$U'_{2j}$ is an increasing real sequence.
The estimate
$$
U'_{2j}\leq \sum_{m=0}^{2j} q^{m(\kappa-\la)} \leq
\tfrac{1}{1-q^{\kappa-\la}}<\infty
$$
proves then that $U'_{2j}$ is a convergent sequence.
With $T_{2j}=f_{\a,\beta}\, q^{2j\la}\, U'_{2j}$, we obtain our
result.

Suppose now that $\la>\kappa$.
Let us note $U''_{2j}:= \sum_{m=0}^{2j} t''_{2j,m}$.
Since the $t''_{2j,m}$ are positive and $t''_{2j+1,m}\geq t''_{2j,m}$
for all $j,m$,
$U''_{2j}$ is an increasing real sequence.
The estimate
$$
U''_{2j}\leq \sum_{m=0}^{2j} q^{m(\la-\kappa)} \leq
\tfrac{1}{1-q^{\la-\kappa}}<\infty
$$
proves then that $U''_{2j}$ is a convergent sequence.
With $T_{2j}=f_{\a,\beta}\, q^{2j\kappa}\, U''_{2j}$, we have
again our result. Moreover, note that if $\la$
and $\kappa$ are both different from zero, the limit of $(T_{2j})$
is zero and more precisely,
\begin{align}
T_{2j}&=\mathcal{O}(q^{2j\lambda}) \text{ if } \kappa>\lambda>0
\label{ka>la},\\
T_{2j}&=\mathcal{O}(q^{2j\kappa}) \text{ if }
\lambda>\kappa>0\label{la>ka}.
\end{align}
Suppose now that $\la = \kappa \neq 0$.
In that case, $(T_{2j})$ also converges rapidly to zero.
Indeed, let us fix $q< \eps <1$. we have $\eps^{-2j\la}\,T_{2j}
= \sum_{m=0}^{2j} c_{m} d_{2j-m} =c*d(2j)$ where $c_m:=f_{\a,\beta}\,
(q/\eps)^{\la m} \, q^{\up_{\a_1}}_{m-\a_1,|\a_1|}$ and
$d_m:=(q/\eps)^{\la m}
q^{\up_{\beta_1}}_{m,|\beta_1|}$. Since both $\sum_m c_m$
and $\sum_m d_m$ are absolutely convergent series, their
Cauchy product $\sum_{2j} \eps^{-2j\la}\,T_{2j}$ is
convergent. In particular, $\lim_{j\to\infty} \eps^{-2j \la}\,
T_{2j}=0$, and
\begin{equation}
T_{2j}= \mathcal{O}(\eps^{2j\lambda})\label{la=ka}.
\end{equation}
Finally, $T_{2j}$ has a finite limit in all cases
except possibly when $\la=\kappa=0$, which is the case
when $\a_1=\a_2=\a_3=\beta_1=\beta_2=\beta_3=0$.
In that case, $t_{2j,m}=1$.

A straightforward computation gives $\tau_1 \big(\pi_\pm(x)\big)
\, \tau_1\big(\pi_{\pm}(y)\big) =
\delta_{\a_1,0}\,\delta_{\beta_1,0}\,\delta_{\a_2,0}\,
\delta_{\beta_2,0}\,\delta_{\a_3,0}\,\delta_{\beta_3,0}$.

Thus,
$$
U_{2j}= T_{2j}-(2j+1) \delta_{\a_1,0}\,\delta_{\beta_1,0}
\,\delta_{\a_2,0}\,\delta_{\beta_2,0} \, \delta_{\a_3,0}
\,\delta_{\beta_3,0}
$$
has always a finite limit when $j\to \infty$.

$(ii)$ The result is clear if $\la=\kappa=0$ (in that case
$U_N=\tau_0=0$).
Suppose $\la$ or $\kappa$ is not zero.
In that case $U_{2j} = T_{2j}$. By (\ref{la>ka}), (\ref{ka>la})
and (\ref{la=ka}), we see that if $\la>\kappa >0$ or $\kappa>\la >0$
or $\kappa=\lambda$, $(T_{2j})$ converges to 0 with a rate
in $\mathcal{O}(\eps^{2j\a})$ where $\a>0$ and $q\leq \eps<1$.
Thus, it only remains to check the cases $(\kappa>0,\lambda=0)$
and $(\kappa=0, \lambda>0)$. In the first one, we get
from (\ref{t'2jm}), $U_{2j}=f_{\a,\beta} \sum_{m=0}^{2j}
q^{m\kappa} q^{\up_{\beta_1}}_{2j-m,|\beta_1|}$. If $\beta_1 = 0$, we
are done.

Suppose $\beta_1> 0$. We have
$q^{\up_{\beta_1}}_{2j-m,|\beta_1|}=\sum_{p=0}^{\infty} l_p \,
q^{r_p}\, q^{2|p|_1(2j-m)}$ where $p=(p_1,\cdots,p_{\beta_1})$ and
$l_p=(-1)^{|p|_1}\genfrac{(}{)}{0pt}{1}{{\half}}{p}$,
$r_p:=2p_1+\cdots +2\beta_1 p_{\beta_1}$.
Thus, cutting the sum in two, we get, noting $L_{2j}:=
f_{\a,\beta}\sum_{m=0}^{2j}q^{m\kappa}$,
$$
U_{2j}-L_{2j}=f_{\a,\beta}\sum_{|p|_1>\kappa/2}l_p \,
q^{r_p}\, \tfrac{q^{4|p|_1j}-q^{(2j+1)\kappa-2|p|_1}}
{1-q^{\kappa-2|p|_1}} + f_{\a,\beta}\sum_{0\neq |p|_1
\leq \kappa/2}l_p \, q^{r_p}\,q^{4|p|_1 j}\sum_{m=0}^{2j}
q^{m(\kappa-2|p|_1)}.
$$
Since $\sum_{0\neq |p|_1\leq \kappa/2}l_p \, q^{r_p}\,q^{4|p|_1 j}
\sum_{m=0}^{2j} q^{m(\kappa-2|p|_1)}$
is in $\mathcal{O}_{j\to \infty}(jq^{4j})$,
we have, modulo a rapidly decreasing sequence,
$$
U_{2j}-L_{2j}\sim f_{\a,\beta}\sum_{|p|_1>\kappa/2}l_p \, q^{r_p}\,
\tfrac{q^{4|p|_1j}-q^{(2j+1)\kappa-2|p|_1}}
{1-q^{\kappa-2|p|_1}}=:f_{\a,\beta}q^{2\kappa j} V_{2j}
$$
with
$$
V_{2j}=\sum_{|p|_1>\kappa/2}l_p \, q^{r_p}\,
\tfrac{1-q^{(2|p|_1-\kappa)(2j+1)}}
{1-q^{2|p|_1-\kappa}} = \sum_{|p|_1>\kappa/2}\
\sum_{m=0}^{2j}\,  l_p \, q^{r_p}\,  q^{(2|p|_1-\kappa)m}.
$$
The family $v_{m,p}:= (l_p \, q^{r_p}\,
q^{(2|p|_1-\kappa)m})_{(p,m)\in I}$, where $I=\set{(p,m)
\in \N^{\beta_1}\times \N  \  :
\ |p|_1>\kappa/2 }$ is (absolutely) summable.
Indeed $|v_{m,p}|\leq |l_p| q^{r_p}\, q^{m}$
so $|v_{m,p}|$ is summable as the product of two summable families.
As a consequence, $\lim_{j\to \infty} V_{2j}$
exists and is finite, which proves that $(q^{2\kappa j} V_{2j})$,
and thus $(U_{2j}-L_{2j})$ converge rapidly to 0.

Suppose now that $\beta_1<0$. In that case,
$q^{\up_{\beta_1}}_{2j-m,|\beta_1|} =
q^{\dn}_{2j-m,|\beta_1|} = q^{\up}_{2j-(m+|\beta_1|),|\beta_1|}$
and by (\ref{t'2jm}), we get $U_{2j}=f_{\a,\beta}
\sum_{m=0}^{2j}q^{m\kappa} q^{\up}_{2j-(m+|\beta_1|),|\beta_1|} =
f_{\a,\beta}\, q^{-|\beta_1|\kappa}\,
\sum_{m=|\beta_1|}^{2j+|\beta_1|}  q^{m\kappa}
q^\up_{2j-m,|\beta_1|}$,
so the same arguments as in case $\beta_1>0$ apply here,
the summation on $m$ simply shifted
of $|\beta_1|$.

The same proof can be applied for the other case
$(\kappa=0, \lambda>0)$. This time, we only
need to use (\ref{t''2jm}) instead of (\ref{t'2jm})
and the preceding arguments follow by replacing $\kappa$
by $\la$ and $\beta_1$ by $\a_1$.
\end{proof}

\begin{remark}
Contrary to the preceding $\tau_0$, the new
functional contains the shift information.
In particular, it filters the parts of nonzero degree.
\end{remark}

If $T\in X \wh X$, $\wt \rho(T) \in \pi_+(\A)\wh
\pi_+(\A)\otimes\pi_-(\A)\wh \pi_-(\A) $.

For notational convenience, we define $\tau_1$ on $\pi'_\pm(\A)\wh
\pi_\pm(\A)$ as
$$
\tau_1\big(\pi'_\pm(x)\wh \pi_\pm(y)\big):=
\tau_1\big(\pi_\pm(x)\big)\, \tau_1\big(\pi_\pm(y)\big).
$$

In the following, the symbol $\sim_e$ means equals modulo a entire
function.

\begin{theorem}
    \label{ncinttau}
Let $T\in  X \wh X$. Then
\begin{align*}
&\hspace{-2cm}(i) \hspace{.7cm} \zeta_\DD^{T}(s)\, \sim_e
2(\tau_1\otimes \tau_1) \, \big(\wt \rho(T)\big)\,
\zeta(s-2) + 2(\tau_0\otimes \tau_1 + \tau_1\otimes  \tau_0)\,
\big(\wt\rho(T)\big)\,\zeta(s-1) \\
& \hspace{2cm}+2(\tau_0\otimes  \tau_0 -
\tfrac{1}{2}\tau_1\otimes \tau_1)\, \big(\wt \rho(T)\big)\,\zeta(s)
,\\
&\hspace{-2cm}(ii) \hspace{.5cm} \ncint T |\DD|^{-3} = 2(\tau_1 \ox
\tau_1) \bigl(\wt \rho(T)\bigl),\\
&\hspace{-2cm}(iii) \hspace{.4cm} \ncint T |\DD|^{-2}=2(\tau_0\otimes
\tau_1 + \tau_1\otimes
\tau_0)\, \big(\wt \rho(T)\big) ,\\
&\hspace{-2cm}(iv) \hspace{.5cm} \ncint T|\DD|^{-1}= 2(\tau_0\otimes
 \tau_0- \tfrac{1}{2}\tau_1\otimes \tau_1)\, \big(\wt \rho(T)\big).
\end{align*}
\end{theorem}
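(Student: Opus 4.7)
\textbf{Proof plan for Theorem \ref{ncinttau}.}
The strategy is to generalize the proof of Theorem \ref{Theo} (where $T\in\B$) to the larger algebra $X\wh X$ by replacing the symbol map $r$ with the extended morphism $\wt\rho$, and replacing $\tau_0$ on $\pi_\pm(\A)$ with the extension on $\pi'_\pm(\A)\wh\pi_\pm(\A)$ constructed in Lemma \ref{tau0ext}. First, observe that only the degree zero part $T^\circ$ of $T$ contributes to $\zeta_\DD^T(s)$, since any term of nonzero degree in the $\Z$-grading shifts the label $j$ in the basis $v^j_{m,l}$ and hence has zero diagonal matrix elements. So we may assume $T=T^\circ$, and since $S\wt\rho(T)I=T$ on $\wt X$, we may compute matrix elements on $\H'\otimes\H'\otimes\C^2$ via $\wt\rho(T)$.

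Using the spectral decomposition \eqref{eq:absdirac}, split the trace into up and down contributions:
\begin{align*}
\zeta_\DD^T(s) = \sum_{j\in\frac{1}{2}\N_0} d_{j^+}^{-s}\sum_{m=0}^{2j}\sum_{l=0}^{2j+1}\langle v^{j\up}_{m,l},T\,v^{j\up}_{m,l}\rangle
+ \sum_{j\in\frac{1}{2}\N}d_j^{-s}\sum_{m=0}^{2j}\sum_{l=0}^{2j-1}\langle v^{j\dn}_{m,l},T\,v^{j\dn}_{m,l}\rangle.
\end{align*}
Under $\wt\rho$, the double sum on $(m,l)$ factors as an iterated partial trace of $\wt\rho(T)\in \pi'_+(\A)\wh\pi_+(\A)\otimes\pi'_-(\A)\wh\pi_-(\A)$, the index $m$ belonging to the first leg of $\H'=\ell^2(\N)\otimes\ell^2(\Z)$ (with the second leg fixed at $\eps_{2j}$ by $S$), and the index $l$ belonging to the first leg of the second copy of $\H'\otimes\C^2$ (with the second leg fixed at $\eps_{2j\pm1}$ by the up/down shift in $J$).

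Next, apply Lemma \ref{tau0ext} twice. The first application, to the partial trace $\sum_{m=0}^{2j}\langle\eps_m\otimes\eps_{2j},\cdot\,\eps_m\otimes\eps_{2j}\rangle$ evaluated on the first tensor factor, gives
\begin{equation*}
{\sum}_{m=0}^{2j}\langle\eps_m\otimes\eps_{2j},\pi'_+(x)\wh\pi_+(y)\,\eps_m\otimes\eps_{2j}\rangle = (2j+1)\,\tau_1(\pi'_+(x)\wh\pi_+(y)) + \tau_0(\pi'_+(x)\wh\pi_+(y))+O(q^{cj})
\end{equation*}
for some $c>0$, uniformly in the generators of $\wt\rho(T)$. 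An identical expansion applies to the second partial sum over $l$, producing a multiplicity factor $(2j+2)$ (resp.\ $2j$) from the up (resp.\ down) cardinality and the same $\tau_1,\tau_0$ functionals on the second tensor leg. Multiplying out the four resulting contributions and combining the up and down summands (using that both contain the same $(\tau_1\otimes\tau_1)(\wt\rho(T))$ leading piece, while the cross terms $(\tau_0\otimes\tau_1)$ and $(\tau_1\otimes\tau_0)$ add with multiplicity 2 because of the up/down doubling, and the $\pm 1$ shifts in $l$ cancel in the constant term) yields
\begin{equation*}
\zeta_\DD^T(s)\sim_e 2(\tau_1\otimes\tau_1)(\wt\rho(T))\,\Sigma_2(s)+ 2(\tau_0\otimes\tau_1+\tau_1\otimes\tau_0)(\wt\rho(T))\,\Sigma_1(s)+ 2(\tau_0\otimes\tau_0-\tfrac{1}{2}\tau_1\otimes\tau_1)(\wt\rho(T))\,\Sigma_0(s),
\end{equation*}
where $\Sigma_k(s):=\sum_{j}j^k d_j^{-s}$ are Hurwitz-type sums that reduce to $\zeta(s-k)$ modulo entire functions. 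This proves (i). Claims (ii)--(iv) then follow immediately by reading off $\Res_{s=k}$ of $\zeta_\DD^T(s)$ at $k=3,2,1$, using $\Res_{s=k+1}\zeta(s-k)=1$ and the fact that $\zeta(s-k)$ contributes no residue at lower integers.

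The main obstacle is the bookkeeping of the two different up/down shifts ($l\in[0,2j+1]$ versus $l\in[0,2j-1]$, and the argument $\eps_{2j\pm1}$ in the $\wh\pi_-$ leg), which is what produces the specific combination $\tau_0\otimes\tau_0-\tfrac{1}{2}\tau_1\otimes\tau_1$ rather than $2\tau_0\otimes\tau_0$; one must verify carefully that the linear-in-$(2j)$ terms coming from the $\pm 1$ shifts cancel symmetrically between the up and down parts, leaving the $-\tfrac{1}{2}\tau_1\otimes\tau_1$ correction as the only residue at $s=1$. The rapid-decay estimate $O(q^{cj})$ in part (ii) of Lemma \ref{tau0ext} is what guarantees that all error terms contribute only entire functions to $\zeta_\DD^T(s)$, so that the $\sim_e$ equivalence is preserved.
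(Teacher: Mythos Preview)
Your approach is correct and coincides with the paper's: split $\zeta_\DD^T(s)$ into its up and down parts, recognize the diagonal sums as the partial traces $\Tr_N$ of Lemma~\ref{tau0ext}, insert the asymptotics $\Tr_N=(N+1)\tau_1+\tau_0+\mathcal{O}(N^{-k})$, and identify the resulting Hurwitz sums with Riemann $\zeta(s-k)$ modulo entire functions.

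Two small points of bookkeeping where the paper is tighter than your sketch. First, the paper reindexes the down sum by $j\mapsto j+\tfrac12$ so that both up and down carry the \emph{same} weight $d_{j^+}^{-s}$; this produces the single sum
\[
\sum_{2j\ge 0}\bigl(\Tr_{2j}(T_+)\Tr_{2j+1}(T_-)+\Tr_{2j+1}(T_+)\Tr_{2j}(T_-)\bigr)\,d_{j^+}^{-s},
\]
after which the expansion $\Tr_{2j}=d_{j^+}\tau_1+\tau_0-\tfrac12\tau_1+\mathcal{O}$ and $\Tr_{2j+1}=d_{j^+}\tau_1+\tau_0+\tfrac12\tau_1+\mathcal{O}$ makes the cross--term cancellations transparent. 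Second, your attribution of the constant $-\tfrac12\tau_1\otimes\tau_1$ to the ``$\pm1$ shifts in $l$'' is slightly misplaced: those shifts cancel the antisymmetric piece $\tau_1\otimes\tau_0-\tau_0\otimes\tau_1$, whereas the $-\tfrac12\tau_1\otimes\tau_1$ comes from $(2j+1)(2j+2)=d_{j^+}^2-\tfrac14$ (and likewise for the down factor). Also, Lemma~\ref{tau0ext}(ii) gives $\mathcal{O}(N^{-k})$ for every $k$, not $\mathcal{O}(q^{cj})$; the weaker estimate is all that is needed (and all that is proved) to conclude that the error contributes an entire function.
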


\begin{proof}
$(i)$ Since $T\in X \wh X$, $\wt \rho(T)$ is a linear combination of
terms
like
$\pi'_+(x)\wh \pi_+(y)\otimes \pi'_-(z)\wh\pi_-(t)$,
where $x,y,z,t \in \A$. Such a term is noted in the
following $T_+\otimes T_-$. Linear combination of
these term is implicit.
With the shorthand $T_{c_1,\cdots,c_p}:=\langle\eps_{c_1}\otimes
\cdots \otimes \eps_{c_p}, T \eps_{c_1}\otimes
\cdots \otimes \eps_{c_p}\rangle$, recalling that $v^{j,\dn}_{m,l}$
is 0
when $j=0$, or $l\geq 2j$, we get
\begin{align*}
\zeta_{\DD}^T(s)=&
\sum_{2j=0}^{\infty}\sum_{m=0}^{2j}\sum_{l=0}^{2j+1} \langle \,
\genfrac{(}{)}{0pt}{1}{{v^{j,\up}_{m,l}}}{0}\,,\, S\wt \rho(T)I\,
\genfrac{(}{)}{0pt}{1}{{v^{j,\up}_{m,l}}}{0}\rangle \,d_{j^+}^{-s} +
\langle \genfrac{(}{)}{0pt}{1}{{0}}{v^{j,\dn}_{m,l}}\,,\, S\wt
\rho(T)I\,
 \genfrac{(}{)}{0pt}{1}{{0}}{v^{j,\dn}_{m,l}}\rangle \,d_{j}^{-s} \\
 =&\sum_{2j=0}^{\infty}\sum_{m=0}^{2j}\sum_{l=0}^{2j+1}
\wt\rho(T)_{m,2j,l,2j,\up}
\,d_{j^+}^{-s}+\sum_{2j=1}^{\infty}\sum_{m=0}^{2j}\sum_{l=0}^{2j-1}
 \wt\rho(T)_{m,2j,l,2j,\dn} \, d_{j}^{-s}  \\
 =&\sum_{2j=0}^{\infty}\big( \Tr_{2j}(T_+)\, \Tr_{2j+1}^{\up}(T_-)
 + \Tr_{2j+1}(T_+)\, \Tr_{2j}^\dn(T_-)\big) \, d_{j^+}^{-s} \,.
 \end{align*}
By Lemma \ref{tau0ext} $(ii)$, for all $k>0$,
\begin{align*}
\Tr_{2j}(T_\pm) &= (2j+\tfrac{3}{2})\tau_1(T_\pm) +
\tau_{0}(T_\pm)-\half \tau_1(T_\pm) +
\mathcal{O}\big((2j)^{-k}\big),\\
\Tr_{2j+1}(T_\pm) &= (2j+\tfrac{3}{2})\tau_1(T_\pm) +
\tau_{0}(T_\pm) +\half \tau_1(T_\pm) +
\mathcal{O}\big((2j)^{-k}\big)\, .
\end{align*}
The result follows by noting that the difference of the
Hurwitz zeta function $\zeta(s,\tfrac{3}{2})$ and Riemann zeta
function $\zeta(s)$ is
an entire function.

$(ii,iii,iv)$ are direct consequences of $(i)$.
\end{proof}

\subsection{The smooth algebra $C^{\infty}(SU_q(2))$}

In \cite{Cindex,SDLSV}, the smooth algebra $C^\infty(SU_q(2)$ is
defined by pulling back the smooth structure $C^\infty(D_{q^\pm}^2)$
into the $C^*$-algebra generated by $\A$, through the morphism $\rho$
and the application $\la$ (the compression which gives an operator on
$\H$ from an operator on $l^2(\N)\ox l^{2}(\N)\ox l^2(\Z)\ox \C^2$).
The important point is that with \cite[Lemma 2, p. 69]{Cindex}, this
algebra is stable by holomorphic calculus. By defining $\rho:=\wt
\rho\circ c$ and $\la(\cdot):= S (\cdot) I$, the same lemma (with
same notation) can be applied to our setting, with $c:= \pi(x)
\mapsto \ul\pi(x)$ and
\begin{align*}
&\mathcal{C}:=C^\infty(D_{q^+}^2)\ox C^\infty(S^1) \ox
C^\infty(D_{q^+}^2)\ox C^\infty(S^1) \ox \mathcal{M}_2(\C)
\end{align*}
as algebra stable by holomorphic calculus containing the image of
$\wt\rho$. Here, we use Schwartz sequences to define the smooth
structures. We finally obtain $C^\infty(SU_q(2))$ with real structure
as a subalgebra stable by holomorphic calculus of the $C^*$-algebra
generated by $\pi(\A) \cup J\pi(\A) J^{-1}$ and containing $\pi(\A)
\cup J \pi(\A) J^{-1}$ .

\begin{corollary}
The dimension spectrum of the real spectral triple
$\big(C^{\infty}(SU_q(2)),\H,D\big)$ is simple and given by
$\{1,2,3\}$. Its KO-dimension is $3$.
\end{corollary}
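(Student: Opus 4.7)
The plan is to deduce both statements from Theorem~\ref{ncinttau} and the sign axioms of a real spectral triple. For the dimension spectrum, I would first show that every pseudodifferential operator $P$ of the real spectral triple admits, modulo $OP^{-\infty}$, a finite asymptotic expansion
\[
P\;\sim\;\sum_{k\geq 0} T_k\,|\DD|^{-k}\quad\text{with}\quad T_k\in X\wh X\ (\text{possibly multiplied by }F).
\]
The three ingredients for this reduction are already in the excerpt: Remark~\ref{pseudodiff} identifies $\B$ with $X$ modulo $OP^{-\infty}$ and locates $\Psi_0^0(\A)$ between $\B$ and the algebra generated by $\B$ and $\B F$; the analogous statement after conjugation by $J$ places $J\pi(\A)J^{-1}$ inside $\wh X$ modulo $OP^{-\infty}$ (note that $J|\DD|=|\DD|J$ implies $J\,OP^{-\infty}J^{-1}\subseteq OP^{-\infty}$); and the symbolic expansion \eqref{one-par} from Chapter~1, combined with the invariance of $X$ and $\wh X$ under $\delta$ (Lemma~\ref{commutateur}), lets us push powers of $|\DD|$ past any element of $X\wh X$ while keeping the coefficients in $X\wh X$ up to strictly lower-order remainders.

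Once $P$ is in this form, Theorem~\ref{ncinttau}$(i)$ yields, for each $T_k\in X\wh X$,
\begin{align*}
\zeta_\DD^{T_k|\DD|^{-k}}(s)\;\sim_e\;& 2(\tau_1\ox\tau_1)\bigl(\wt\rho(T_k)\bigr)\,\zeta(s+k-2)
+ 2(\tau_0\ox\tau_1+\tau_1\ox\tau_0)\bigl(\wt\rho(T_k)\bigr)\,\zeta(s+k-1) \\
&+ 2\bigl(\tau_0\ox\tau_0-\tfrac12\tau_1\ox\tau_1\bigr)\bigl(\wt\rho(T_k)\bigr)\,\zeta(s+k).
\end{align*}
Since the Riemann zeta function has a single simple pole at $s=1$, each summand contributes at most simple poles at $s\in\{3-k,2-k,1-k\}$; truncating the expansion at some $N>3$ leaves an $OP^{-N}$ remainder whose trace is holomorphic near $\{1,2,3\}$. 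Hence every pole of $\zeta_\DD^P$ is simple and lies in $\{1,2,3\}$, so the dimension spectrum is contained in $\{1,2,3\}$. The reverse inclusion follows from Theorem~\ref{Theo}, which exhibits operators in $\B$ for which each of the three residues is nonzero.

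For the KO-dimension, a direct check of the sign axioms suffices: \eqref{DefJ} gives $J^{*}=J^{-1}=-J$ and hence $J^{2}=-1$; the Dirac operator \eqref{eq:dirac} satisfies $J\DD=\DD J$, i.e.\ $\epsilon=+1$; and since the triple is odd there is no chirality. The only row of the sign table in the definition of a real spectral triple that is compatible with $(J^{2},J\DD J^{-1})=(-1,+\DD)$ and no grading is $n\equiv 3\pmod 8$.

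The main technical obstacle should be handling the factors of $F$: in order to combine terms of the form $T F^{\nu}|\DD|^{-k}$ with $T\in X\wh X$, one needs $F$ to commute with $X\wh X$ modulo $OP^{-\infty}$. On the $X$ side this is \eqref{Fcommutes}; on the $\wh X$ side it follows from $JFJ^{-1}=F$, which is a consequence of $J|\DD|=|\DD|J$ and $J\DD=\DD J$. With this in hand, terms containing $F$ split into up/down diagonal contributions, each of which falls under Theorem~\ref{ncinttau} and therefore introduces no pole outside $\{1,2,3\}$.
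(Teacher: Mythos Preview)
There is a gap in your dimension-spectrum argument at the sentence ``Hence every pole of $\zeta_\DD^P$ is simple and lies in $\{1,2,3\}$.'' Your expansion $P\sim\sum_{k\ge 0}T_k|\DD|^{-k}$ gives, for each summand, poles of $\zeta_\DD^{T_k|\DD|^{-k}}(s)=\zeta_\DD^{T_k}(s+k)$ at $s\in\{3-k,2-k,1-k\}$; summed over $k\ge 0$ this produces potential simple poles at every integer $\le 3$, not only at $\{1,2,3\}$. These are not phantom poles: for $P=T|\DD|^{-1}\in\Psi(\A)\cap OP^0$ with $T=a_+a_+^*\in X$, Theorem~\ref{Theo} gives $\Res_{s=0}\zeta_\DD^P(s)=\ncint T|\DD|^{-1}\ne 0$. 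So your reduction cannot by itself bound the spectrum below by~$1$.

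The paper does not expand in negative powers of $|\DD|$. It tests the dimension spectrum on the (enlarged) Connes--Moscovici algebra generated by $\delta^k(\pi(\A))$, $\delta^k([\DD,\pi(\A)])$ and their $J$-conjugates, an algebra that contains no $|\DD|^{-1}$. Since $[\DD,a]=F\,\delta(a)$ and $F$ commutes with both $X$ and $\wh X$ (the latter because $JFJ^{-1}=F$, as you correctly note), this algebra is contained, modulo $OP^{-\infty}$, in $X\wh X + X\wh X\cdot F$; Theorem~\ref{ncinttau} then applies directly with $k=0$ only, and your up/down split disposes of the $F$-factor. Your KO-dimension verification and the appeal to Theorem~\ref{Theo} for the reverse inclusion are fine.
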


\begin{proof}
Since $F$ commutes with $\ul\pi(\A)$, the pseudodifferential
operators of order 0 (without the real structure) 
are exactly (modulo $OP^{-\infty}$) the operators in $\B
+ \B F$. From Theorem \ref{Theo} we see that the dimension spectrum
of $SU_q(2)$ without taking into account the reality operator $J$ is
$\set{1,2,3}$, in other words, the possible poles of $\zeta^b_{\DD}:
s\mapsto \Tr(b F^{\eps}|\DD|^{-s})$ (with $\eps\in \set{0,1}$, $b\in
\B$) are in $\set{1,2,3}$. 
Theorem \ref{ncinttau} $(i)$ shows that the possible poles are still
$\set{1,2,3}$ when we take into account the real structure of
$SU_q(2)$, that is to say, when $\B$ is enlarged to $\B J\B J^{-1}$.
Indeed, any element of  $\B J\B J^{-1}$ is in $X\wh X$ and it is
clear from the preceding proof that adding $F$ in the previous zeta
function do not add any pole to $\set{1,2,3}$.

All arguments go true from the polynomial algebra $\A(SU_q(2))$ to
the smooth pre-C$^*$-algebra $C^{\infty}(SU_q(2))$.

KO-dimension refers just to $J^2=-1$ and $\DD\,J=J\,\DD$ since there
is no chirality because spectral dimension is 3.
\end{proof}

\subsection{Noncommutative integrals with reality operator and
one-forms on $SU_q(2)$}

The goal of this section is to obtain the following suppression of
$J$:

\begin{theorem}
\label{ncintJ}
Let $A$ and $B$ be $\delta$-one-forms. Then
\begin{align*}
&  \hspace{-3cm} (i)\hspace{0.5cm} \ncint AJBJ^{-1} |\DD|^{-3}=
\tfrac 12  \ncint  A  \vert \DD \vert ^{-3} \overline{ \ncint B
\vert \DD \vert ^{-3}},\\
&   \hspace{-3cm}  (ii)\hspace{0.4cm}\ncint AJBJ^{-1} |\DD|^{-2}=
\half  \ncint  A  \vert \DD \vert
^{-2}\,\overline{\ncint  B  \vert \DD \vert ^{-3}} + \half\ncint   A
\vert \DD \vert ^{-3}\, \overline{\ncint B \vert \DD \vert
^{-2}},\\
&   \hspace{-3cm}  (iii)\hspace{0.3cm} \ncint A^2 JBJ^{-1}
|\DD|^{-3}=\tfrac 12 \ncint  A^2  \vert \DD
\vert ^{-3} \,\overline{\ncint B  \vert \DD \vert ^{-3}},\\
&    \hspace{-3cm} (iv)\hspace{0.4cm} \ncint \delta(A)A |\DD|^{-3}=
\ncint \delta(A)JAJ^{-1}|\DD|^{-3} = 0.
\end{align*}
\end{theorem}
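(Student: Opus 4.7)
The plan is to reduce each integral via Theorem \ref{ncinttau} to an evaluation of $\tau_0,\tau_1$ on $\wt\rho(\cdot)$, and then exploit a structural decoupling: since any $\delta$-one-form lies in $X$ modulo $OP^{-\infty}$ and $J\cdot J^{-1}$ sends $X$ into $\wh X$ modulo $OP^{-\infty}$, one has
\begin{equation*}
\wt\rho(AJBJ^{-1}) = \wt\rho(A)\,\wt\rho(JBJ^{-1}),
\end{equation*}
with the first factor in $\pi'_+(\A)\otimes\pi'_-(\A)$ and the second in $\wh\pi_+(\A)\otimes\wh\pi_-(\A)$, so that the functionals computing the noncommutative integral factor through the two sides independently.

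The two key factorization formulas on the mixed algebra $\pi'_\pm(\A)\wh\pi_\pm(\A)$ that I would establish are $\tau_1(\pi'_\pm(x)\wh\pi_\pm(y)) = \tau_1(\pi_\pm(x))\,\tau_1(\pi_\pm(y))$, which is the very definition given before Theorem \ref{ncinttau}, and the Leibniz-type identity
\begin{equation*}
\tau_0(\pi'_\pm(x)\wh\pi_\pm(y)) = \tau_0(\pi_\pm(x))\,\tau_1(\pi_\pm(y)) + \tau_1(\pi_\pm(x))\,\tau_0(\pi_\pm(y)),
\end{equation*}
obtained by returning to the computation of $\Tr_N$ for PBW monomials in Lemma \ref{tau0ext} and identifying the finite part. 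Secondly, the antilinearity of $J$ must be absorbed by writing $B = B^\beta_\alpha M^\alpha_\beta$ in the PBW basis, so that $JBJ^{-1} = \overline{B^\beta_\alpha}\,JM^\alpha_\beta J^{-1}$; since the formulas for $(\wh a_\pm,\wh b_\pm)$ under $\wh\pi_\pm$ mirror those of $(a_\pm,b_\pm)$ under $\pi_\pm$, and $\tau_k\otimes\tau_k$ applied to the canonical expressions $r(M^\alpha_\beta)^\circ$ takes real values, one obtains the crucial conjugation identity
\begin{equation*}
(\tau_1\otimes\tau_1)(\wt\rho(JBJ^{-1})) = \overline{(\tau_1\otimes\tau_1)(r(B)^\circ)},
\end{equation*}
together with its analogue for $\tau_0\otimes\tau_1+\tau_1\otimes\tau_0$.

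Combining these ingredients, parts (i) and (iii) are immediate: only $\tau_1\otimes\tau_1$ contributes at order $3$, the factorization gives a product, and each factor becomes $\tfrac12\ncint\cdot\,|\DD|^{-3}$ via Theorem \ref{ncinttau}(ii) (using $A^2\in X$ modulo $OP^{-\infty}$ for (iii)). Part (ii) follows by the same scheme, the Leibniz identity for $\tau_0$ producing exactly the two cross-terms upon expanding $(\tau_0\otimes\tau_1+\tau_1\otimes\tau_0)$ on the product $\wt\rho(A)\wt\rho(JBJ^{-1})$ and matching with Theorem \ref{ncinttau}(iii).

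For (iv), the first identity rests on $\ncint\delta(X)|\DD|^{-3}=0$ for any pseudodifferential $X$, a direct consequence of the trace property of $\ncint$ applied to $[|\DD|,X]|\DD|^{-3}$; taking $X=A^2$ yields $\ncint(\delta(A)A + A\delta(A))|\DD|^{-3} = 0$, and cyclicity together with $[A,|\DD|^{-3}]\in OP^{-4}$ gives $\ncint A\delta(A)|\DD|^{-3} = \ncint\delta(A)A|\DD|^{-3}$, hence $2\ncint\delta(A)A|\DD|^{-3}=0$. The second identity is then an immediate consequence of part (i) applied with $B$ replaced by $\delta(A)\in X$: the right-hand side carries the factor $\overline{\ncint\delta(A)|\DD|^{-3}} = 0$. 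The principal technical obstacle is the careful derivation of the Leibniz factorization for $\tau_0$ and of the reality of $(\tau_k\otimes\tau_k)(r(M^\alpha_\beta)^\circ)$ on the canonical basis, both requiring systematic PBW bookkeeping along the lines of the asymptotic analysis in Lemma \ref{tau0ext}.
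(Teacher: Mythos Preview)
Your overall strategy is sound for parts (i), (iii) and (iv), and your argument for the first identity in (iv) via $\ncint \delta(A^2)|\DD|^{-3}=0$ plus cyclicity is in fact cleaner than the paper's symmetry argument on PBW indices (Lemma~\ref{ncintA2}(iv)--(v)). For the second identity in (iv), note that $\delta(A)$ is not itself a $\delta$-one-form, so you cannot literally invoke (i); but your proof of (i) only uses that the two factors lie in $X$ modulo $OP^{-\infty}$, and $\delta(A)\in X$, so this is harmless.

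There is, however, a genuine gap in part (ii): the Leibniz identity
\[
\tau_0\bigl(\pi'_\pm(x)\wh\pi_\pm(y)\bigr)=\tau_0(\pi_\pm(x))\,\tau_1(\pi_\pm(y))+\tau_1(\pi_\pm(x))\,\tau_0(\pi_\pm(y))
\]
is \emph{false} as stated. Take $x=1$, $y=b$: from $\wh\pi_+(b)\,\eps_m\otimes\eps_{2j}=-q^{2j-m}\eps_{m+1}\otimes\eps_{2j+1}$ the operator shifts both indices, so $\Tr_{2j}(\wh\pi_+(b))=0$ and the left side vanishes; but $\tau_1(\pi_+(1))=1$ and $\tau_0(\pi_+(b))=\sum_{n\ge 0}q^n=\tfrac{1}{1-q}$, so the right side equals $\tfrac{1}{1-q}$. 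The point is that $\wh\pi_+$ carries a coupled shift in the $m$ and $2j$ indices that the ordinary $\pi_+$ does not see, so the diagonal constraints on $\pi'_+(x)\wh\pi_+(y)$ do not factor into separate constraints on $x$ and $y$.

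What \emph{does} hold is the factorization at the level of the full functional $(\tau_0\otimes\tau_1+\tau_1\otimes\tau_0)\,\wt\rho(M^\alpha_\beta\,JM^{\alpha'}_{\beta'}J^{-1})$, and this is exactly what the paper establishes by direct PBW computation in Lemma~\ref{ncintJlem}(ii) (Appendix~C), crucially invoking the separation Lemma~\ref{tech-separation}. The positivity of the $b,b^*$ exponents is used there to convert the joint constraint $\delta_{\lambda',0}$ into separate vanishing of all primed $b$-indices, which is what makes the two-term product structure emerge. Your proposal does not bypass this computation; it relocates it to the verification of a Leibniz rule that, in the form you wrote, fails. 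The paper's route---compute each of $\ncint A|\DD|^{-k}$, $\ncint AJA'J^{-1}|\DD|^{-k}$ explicitly on the PBW basis (Lemmas~\ref{ncintLin}, \ref{ncintA2}, \ref{ncintJlem}) and match---is the honest way through.
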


We gather at the beginning of this section the main notations for
technical lemmas which will follow.

For any pair $(k,p)\in \N^3\times \N^3$ such that $k_i\leq |\a_i|$,
$p_i\leq |\beta_i|$, where $\a,\beta \in \Z\times\N \times \N$, we
define
\begin{align*}
v_{k,p} &:=  g(p)\
\genfrac(){0pt}{1}{|{\a_1}|}{k_1}_{q^{2\eta_{\a_1}}}
\genfrac(){0pt}{1}{{\a_2}}{k_2}
\genfrac(){0pt}{1}{{\a_3}}{k_3}\genfrac(){0pt}{1}{|{\beta_1}|}{p_1}_{q^{2\eta_{\beta_1}}}
\genfrac(){0pt}{1}{{\beta_2}}{p_2}
\genfrac(){0pt}{1}{{\beta_3}}{p_3} \,
(-1)^{k_1+p_1+{\a_2}+{\a_3}+{\beta_2}+{\beta_3}}\, q^{\sigma_{k,p}},\\
h_{k,p} &:= {\a_1}+{\a_2}-{\a_3} -2 (\eta_{\a_1} k_1 +k_2 -k_3)
+g(p)\, , \\
g(p) &:=  {\beta_1}+{\beta_2} -{\beta_3}-2(\eta_{\beta_1} p_1 +p_2
-p_3)\, ,\\
\sigma_{k,p}&:= k_1+p_1 +\sigma^t_{k,p} +\sigma^u_{k,p}\, , \\
\sigma^t_{k,p}&:= k_1\wh k_2-\wh k_3(k_1+k_2)+\eta_{\beta_1} \wh
p_1|k|_1 +\wh p_2(|k|_1+p_1) -\wh p_3(|k|_1+p_1+p_2)\, , \\
\sigma^u_{k,p}&:=(k_3+\eta_{\beta_1} \wh p_1 -p_2 +p_3)(k_1+\wh k_2
+\wh k_3) -k_2(k_1+\wh k_2) + (p_1+\wh p_2)(-p_2+p_3) +\wh p_3p_3 \,
,\\
t_{k,p}&= a_{\a_1}^{\wh k_1} \,a^{\wh k_2}\, {a^*}^{\wh k_3}\,
a_{\beta_1}^{\wh p_1}\, a^{\wh p_2} \,{a^*}^{\wh p_3}
\,b^{|k|_1+|p|_1}    \, ,\\
u_{k,p}&= a_{\a_1}^{\wh k_1} \,{a^*}^{k_2} \,a^{k_3}
\,a_{\beta_1}^{\wh p_1} \,{a^*}^{p_2} \,a^{p_3}\,b^{|\wt k|_1+|\wt
p|_1}   \, .
\end{align*}
where we used the notation
$$
\wh k_i:= |{\a_i}|-k_i, \,  \wh
p_i:=|{\beta_i}|-p_i,
$$
so $0\leq \wh k_i \leq |\a_i|,\, 0\leq  \wh
p_i \leq |\beta_i|$. We will also use the shorthand $\wt k:=(k_1,\wh
k_2,\wh k_3)$.

For ${\beta_1}\in \Z$ and $j\in \N$, we define
\begin{align*}
w_1({\beta_1},j) &:= \sum_{n=0}^\infty \big( q^{2jn}
(q_{n,|{\beta_1}|}^{\up_{\beta_1}})^{2}-\delta_{j,0}\big),\\
w_{\beta}^\a&:= 2{\beta_1}\,
q^{{\beta_1}(2{\a_3}+{\beta_3}-{\beta_2})}
\,w_1({\beta_1},{\a_3}+{\beta_3}).
\end{align*}

We introduce the following notations:
\begin{align*}
q^+_{k,p,n}&:=q^{n(|k|_1+|p|_1)} q_{n+r^+_{k,p} -\eta_{\a_1} \wh
k_1,\wh k_1}^{\up_{\a_1}} q_{n-\wh k_3+
\eta_{\beta_1} \wh p_1+\wh p_2-\wh p_3,\wh k_2}^\up
q_{n+\eta_{\beta_1} \wh p_1+\wh p_2-\wh p_3
,\wh k_3}^\dn q_{n+\wh p_2-\wh p_3,\wh p_1}^{\up_{\beta_1}} q_{n-\wh
p_3,\wh p_2}^{\up} q_{n,\wh p_3}^\dn\, ,\\
q^-_{k,p,n}&:=q^{n(|\wt k|_1+|\wt p|_1)} q_{n+r^-_{k,p}-\eta_{\a_1}
\wh k_1,\wh k_1}^{\up_{\a_1}}
q_{n+k_3+\eta_{\beta_1} \wh p_1-p_2 +p_3 ,k_2}^\dn
q_{n+\eta_{\beta_1} \wh p_1-p_2+p_3,k_3}^\up q_{n-p_2+p_3,\wh
p_1}^{\up_{\beta_1}} q_{n+p_3,p_2}^\dn q_{n,p_3}^\up\\
&\hspace{1cm}\times (-1)^{|\wt k|_1+|\wt p|_1},\\
r^+_{k,p}&:=\eta_{\a_1} \wh k_1+\wh k_2-\wh k_3+\eta_{\beta_1} \wh
p_1  +\wh p_2  -\wh p_3 \, , \\
r^-_{k,p}&:=\eta_{\a_1} \wh k_1 -k_2+k_3+\eta_{\beta_1} \wh
p_1-p_2+p_3\, .
\end{align*}
Thus, $\pi_+(t_{k,p})\eps_n = q^+_{k,p,n} \eps_{n+ r^+_{k,p}}$ and
$\pi_-(u_{k,p})\eps_n = q^-_{k,p,n} \eps_{n+ r^-_{k,p}}$.

\begin{lemma}
    \label{technicalxdy}
We have
$$
r\left( (M_{\beta}^\a)^\circ \right) =
\sum_{k,p}\delta_{h_{k,p},0}\, v_{k,p}\, \pi_+(t_{k,p}) \otimes
\pi_-(u_{k,p})
$$
where the summation is done on $k_i, p_i$ in $\N$ such that $k_i\leq
{|\a_i|}, p_i \leq {|\beta_i|}$ for $i\in \set{1,2,3}$.
\end{lemma}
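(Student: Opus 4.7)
The strategy is to expand $M_\beta^\alpha = \pi(m^\alpha)\,\delta\bigl(\pi(m^\beta)\bigr)$ in terms of the homogeneous generators $a_\pm, b_\pm$ (working modulo $OP^{-\infty}$, where $\pi$ can be replaced by $\underline\pi$), then to put every term into a canonical ordering so that the $\mathbb Z$-grading can be read off and the degree-0 part extracted, and finally to apply $r$ factor by factor via \eqref{eq:r}. The appearance of $\delta_{h_{k,p},0}$ encodes the degree-0 condition, the $q$-binomial coefficients in $v_{k,p}$ come from the two binomial-type expansions, the factor $g(p)$ is produced by $\delta$ acting on $\underline\pi(m^\beta)$, and the phase $q^{\sigma_{k,p}}$ records the $q$-commutations needed to reach canonical form.

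First I would expand $\underline\pi(a^{\alpha_1}) = (a_+ + a_-)^{\alpha_1}$ using a $q$-binomial formula adapted to the relation $a_- a_+ = q^2 a_+ a_-$ from \eqref{eq:commutrulefora,b} (and similarly for $\alpha_1<0$ using $a_+^* a_-^* = q^2 a_-^* a_+^*$, hence the convention $a_{\alpha_1}$ and the Gaussian binomial $\binom{|\alpha_1|}{k_1}_{q^{2\eta_{\alpha_1}}}$); analogously for $\underline\pi(b^{\alpha_2}) = (b_+ + b_-)^{\alpha_2}$ with $b_- b_+ = q^2 b_+ b_-$ and for $\underline\pi((b^*)^{\alpha_3})$. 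Multiplying these three factors together, I would then commute all the $a$-type symbols past all the $b$-type symbols using \eqref{eq:commutrulefora,b}, collecting the $q$-phase $q^{\sigma^t_{k,p}}$ and the sign $(-1)^{\alpha_2+\alpha_3}$. The same procedure applied to $\underline\pi(m^\beta)$, followed by $\delta$, uses Lemma~\ref{commutateur}(iii) to produce, for each multi-index $p$, a weight equal to the total degree of the corresponding homogeneous monomial; this is exactly the factor $g(p)$.

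Next I would form the product $\underline\pi(m^\alpha)\,\delta\bigl(\underline\pi(m^\beta)\bigr)$ and commute the $(m^\beta)$-block past the residual $b$-powers of the $(m^\alpha)$-block to reach the canonical monomials $t_{k,p}$ and $u_{k,p}$; this second reordering contributes $q^{\sigma^u_{k,p}}$. At this point every term has a well-defined $\mathbb Z$-degree equal to $h_{k,p}$ (the signed sum of the $\pm$ labels from Lemma~\ref{commutateur}(i)), so extracting $(\cdot)^\circ$ inserts the factor $\delta_{h_{k,p},0}$. Finally, applying $r$ term by term using \eqref{eq:r} splits each monomial in $a_\pm, b_\pm$ into a tensor product of one monomial in $\pi_+(\A)$ and one in $\pi_-(\A)$; the $a_+^{\hat k_1}$ factors go to $\pi_+(a_{\alpha_1})^{\hat k_1}\otimes\pi_-(a_{\alpha_1})^{\hat k_1}$, the $b_\pm$ factors to products of $\pi_\pm(a)$ and $\pi_\pm(b)$ (with the minus signs from $r(a_-)$, $r(b_\pm)$ gathered into the overall sign), which repackages the expression as $\pi_+(t_{k,p})\otimes\pi_-(u_{k,p})$.

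The main obstacle is a purely combinatorial bookkeeping one: verifying that the three separate sources of $q$-phases (the two intra-block reorderings giving $\sigma^t_{k,p}$, the inter-block reordering giving $\sigma^u_{k,p}$, plus the additional $q^{k_1+p_1}$ coming from the $q^2$ in $r(a_-)=-q\,\pi_+(b)\otimes\pi_-(b^*)$) combine to exactly $\sigma_{k,p}=k_1+p_1+\sigma^t_{k,p}+\sigma^u_{k,p}$, and that all signs from $r(a_-), r(b_\pm)$, together with the $(-1)^{\alpha_2+\alpha_3+\beta_2+\beta_3}$ from raising $b_-, b_-^*$ to powers, assemble into $(-1)^{k_1+p_1+\alpha_2+\alpha_3+\beta_2+\beta_3}$. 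All other steps are routine applications of the $q$-binomial theorem and of the commutation table \eqref{eq:commutrulefora,b}; the degree constraint $h_{k,p}=0$ is then automatic from Lemma~\ref{commutateur}(i) applied to the monomials $t_{k,p}$ and $u_{k,p}$ whose total degree coincides with $h_{k,p}$.
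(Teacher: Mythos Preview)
Your overall strategy is the paper's: expand $\underline\pi(m^\alpha)$ and $\delta(\underline\pi(m^\beta))$ via binomial-type formulas into ordered monomials $c_{k,p}$ in $a_\pm,b_\pm$, pick out the degree-zero part with $\delta_{h_{k,p},0}$, apply the homomorphism $r$, and reorder into the canonical monomials $t_{k,p},u_{k,p}$. Where your account diverges from the paper is the \emph{order} of the last two steps, and consequently your attribution of the phases and signs is off.

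In the paper nothing is reordered inside $X$ before $r$ is applied: the monomial
\[
c_{k,p}=a_{+,\alpha_1}^{\hat k_1}a_{-,\alpha_1}^{k_1}\,b_+^{\hat k_2}b_-^{k_2}\,(b_+^*)^{\hat k_3}(b_-^*)^{k_3}\cdot(\text{same in }p,\beta)
\]
comes out already in this order directly from the three binomial expansions, so there is no ``commuting $a$-types past $b$-types'' at this stage. The factor $(-1)^{k_1+p_1+\alpha_2+\alpha_3+\beta_2+\beta_3}\,q^{k_1+p_1}$ is produced \emph{by} $r$ itself (each $r(a_-)$ contributes $-q$, each $r(b_\pm)$ and $r(b_\pm^*)$ contributes $-1$), not by any reordering. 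After $r$ one sits inside $\pi_\pm(\A)$, where $b=b^*$ and the single relation $b\,a_j=q^{\eta_j}a_j\,b$ suffices; pushing all $b$'s to the right in $t'_{k,p}$ (resp.\ $u'_{k,p}$) is exactly what generates $q^{\sigma^t_{k,p}}$ (resp.\ $q^{\sigma^u_{k,p}}$). So the phases you name are correct, but they arise from reordering inside $\pi_\pm(\A)$ \emph{after} $r$, which is simpler than tracking the full table \eqref{eq:commutrulefora,b} in $X$.
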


\begin{proof}
Since $\ul\pi(m^\a)=
(a_++a_-)^{\a_1}(b_++b_-)^{\a_2}(b_+^*+b_-^*)^{\a_3}$, with
$v_k:=\genfrac(){0pt}{1}{|{\a_1}|}{k_1}_{q^{2\eta_{\a_1}}}
\genfrac(){0pt}{1}{{\a_2}}{k_2}
\genfrac(){0pt}{1}{{\a_3}}{k_3}$,
$$
\ul \pi(m^\a) = {\sum}_k v_k\, c_k
\text{ where }
c_k:=a_{+,{\a_1}}^{|{\a_1}|-k_1}\, a_{-,{\a_1}}^{k_1}
\,b_+^{{\a_2}-k_2} \,b_-^{k_2}\, {b_+^*}^{{\a_3}-k_3}
\,{b_-^*}^{k_3}.
$$
By Lemma \ref{commutateur} $(iii)$ we see that
$\delta(\ul \pi(m^\beta)) = \sum_p w_p\, d_p$ where we introduce

$w_p:=\genfrac(){0pt}{1}{|{\beta_1}|}{p_1}_{q^{2\eta_{\beta_1}}}\genfrac(){0pt}{1}{{\beta_2}}{p_2}
\genfrac(){0pt}{1}{{\beta_3}}{p_3}$ and
$d_p:=g(p)\,a_{+,{\beta_1}}^{|{\beta_1}|-p_1}\, a_{-,{\beta_1}}^{p_1}
\,b_+^{{\beta_2}-p_2} \,b_-^{p_2}\, {b_+^*}^{{\beta_3}-p_3}
\,{b_-^*}^{p_3} $.

As a consequence, $(M^{\a}_\beta)^\circ = \sum_{k,p}
\delta_{h(k,p),0}\  g(p)\  v_k \,w_p \,c_{k,p}$ where
\begin{equation}
    \label{ckp}
c_{k,p}=a_{+,{\a_1}}^{\wh k_1} \,a_{-,{\a_1}}^{k_1} \,b_+^{\wh k_2}\,
b_-^{k_2} \,{b_+^*}^{\wh k_3}\, {b_-^*}^{k_3}\,
a_{+,{\beta_1}}^{\wh p_1} \,a_{-,{\beta_1}}^{p_1}\,
b_+^{\wh p_2} \,b_-^{p_2}\, {b_+^*}^{\wh p_3} \,{b_-^*}^{p_3}
\end{equation}
With (\ref{ckp}), we get
$r(c_{k,p})= (-1)^{k_1+p_1+{\a_2}+{\a_3}+{\beta_2}+{\beta_3}}\,
q^{k_1+p_1}\, \pi_+(t'_{k,p}) \otimes \pi_-(u'_{k,p})$ where
\begin{align*}
t'_{k,p}&=  a_{\a_1}^{\wh k_1}\,b^{k_1}\,a^{\wh k_2}\, b^{k_2}\,
{a^*}^{\wh k_3}\,b^{k_3}
a_{\beta_1}^{\wh p_1}\,b^{p_1}\,a^{\wh p_2} \,b^{p_2} \,{a^*}^{\wh
p_3}\,b^{p_3}\, ,\\
u'_{k,p}&=  a_{\a_1}^{\wh k_1} \,b^{k_1}\,b^{\wh k_2}\,{a^*}^{k_2}
\,{b}^{\wh k_3}a^{k_3} \,a_{\beta_1}^{\wh p_1}\,b^{p_1}\,b^{\wh
p_2}\, {a^*}^{p_2} \,{b}^{\wh p_3}\,a^{p_3}\,  .
\end{align*}
A recursive use of relation $b a_j =q^{\eta_j} a_j b$ yields the
result.
\end{proof}

\begin{lemma}
    \label{ncintLin}
We have

(i) $(\tau_1 \otimes \tau_1) \big(r(M^{\a}_\beta)^\circ\big) =
{\beta_1}\
\delta_{{\a_1},-{\beta_1}}\,\delta_{{\a_2},0}\,\delta_{{\a_3},0}\,
\delta_{{\beta_2},0}\, \delta_{{\beta_3},0}\, .$

(ii) $(\tau_1 \otimes \tau_0 + \tau_0 \otimes \tau_1)
\big(r(M^{\a}_\beta)^\circ\big)
=\delta_{{\a_1},-{\beta_1}} \,
\delta_{{\a_2}+{\beta_2},{\a_3}+{\beta_3}}
\, w_\beta^\a.$

In particular, if $A$ is a $\delta$-one-form, we have
\begin{align*}
\ncint A |\DD|^{-3} &= 2 {\beta_1 } \, A_{-\beta_1 0 0 }^{\beta_1 0 0
} \,, \\ 
\ncint A |\DD|^{-2} &= 2 w^{\a}_\beta\, B(A)^{\beta}_\a  \,.
\end{align*}
where we implicitly summed on all $\a,\beta$ indices.
\end{lemma}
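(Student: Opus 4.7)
The plan is to apply Lemma \ref{technicalxdy} and evaluate the functionals on the explicit sum
$r((M^\a_\beta)^\circ) = \sum_{k,p}\delta_{h_{k,p},0}\, v_{k,p}\, \pi_+(t_{k,p}) \otimes \pi_-(u_{k,p})$, exploiting two facts: first, $\tau_1$ annihilates any $\pi_\pm(x)$ whose expansion carries a factor $b$ or $b^*$ (since $\sigma(\pi_\pm(b))=\sigma(\pi_\pm(b^*))=0$); second, $\tau_0(\pi_\pm(x))$ vanishes on operators that produce a genuine shift on $\ell^2(\N)$.

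For part $(i)$, applying $\tau_1\otimes\tau_1$ forces simultaneously $|k|_1+|p|_1=0$ (from $t_{k,p}$) and $|\wt k|_1+|\wt p|_1=0$ (from $u_{k,p}$). Since all entries are nonnegative, this pins down $k=p=0$ and $\a_2=\a_3=\beta_2=\beta_3=0$. In this configuration, $\sigma_{0,0}=0$, all binomials collapse, so $v_{0,0}=g(0)=\beta_1$, and $h_{0,0}=\a_1+\beta_1$; the operators $\pi_+(t_{0,0})$ and $\pi_-(u_{0,0})$ reduce to $a_{\a_1}^{|\a_1|}a_{\beta_1}^{|\beta_1|}$, whose symbols are $z^{\a_1+\beta_1}$ and give $\tau_1=1$ under the constraint $\a_1=-\beta_1$ enforced by $\delta_{h_{0,0},0}$. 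This produces precisely the right-hand side of $(i)$.

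For part $(ii)$, the two pieces $\tau_1\otimes\tau_0$ and $\tau_0\otimes\tau_1$ contribute through distinct unique summands. In $\tau_1\otimes\tau_0$, the $\tau_1$ factor selects $k=p=0$, so $u_{0,0}=a_{\a_1}^{|\a_1|}a_{\beta_1}^{|\beta_1|}b^{\a_2+\a_3+\beta_2+\beta_3}$; diagonality forces $\a_1+\beta_1=0$ and, jointly with $h_{0,0}=0$, yields $\a_2+\beta_2=\a_3+\beta_3$. Using the identity $(q^{\up_{\a_1}}_{n-\a_1,|\a_1|})^2 = (q^{\up_{\beta_1}}_{n,|\beta_1|})^2$ valid whenever $\beta_1=-\a_1$, direct computation gives $\tau_0(\pi_-(u_{0,0}))=w_1(\beta_1,\a_3+\beta_3)$, while $v_{0,0}=(\beta_1+\beta_2-\beta_3)q^{\beta_1(\a_2+\a_3)}$ (signs square to $+1$ since $\a_2+\a_3+\beta_2+\beta_3=2(\a_3+\beta_3)$). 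Symmetrically, in $\tau_0\otimes\tau_1$ the vanishing of $|\wt k|_1+|\wt p|_1$ selects the unique $(k,p)=((0,\a_2,\a_3),(0,\beta_2,\beta_3))$, whose coefficient evaluates to $(\beta_1-\beta_2+\beta_3)q^{\beta_1(\a_2+\a_3)}$ times the same $w_1(\beta_1,\a_3+\beta_3)$. Adding the two contributions, the coefficient becomes $2\beta_1\, q^{\beta_1(\a_2+\a_3)}w_1(\beta_1,\a_3+\beta_3)$; substituting $\a_2+\a_3 = 2\a_3+\beta_3-\beta_2$ via the constraint $\a_2+\beta_2=\a_3+\beta_3$ gives precisely $w^\a_\beta$.

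The corollaries are immediate: writing $A=A^\beta_\a M^\a_\beta$ and invoking Theorem \ref{ncinttau}~$(ii),(iii)$ yields $\ncint A|\DD|^{-3}=2A^\beta_\a\cdot\beta_1\delta_{\a_1,-\beta_1}\delta_{\a_2,0}\delta_{\a_3,0}\delta_{\beta_2,0}\delta_{\beta_3,0}=2\beta_1 A^{\beta_1 00}_{-\beta_1 00}$, and analogously $\ncint A|\DD|^{-2}=2w^\a_\beta B(A)^\beta_\a$. The main obstacle is the combinatorial bookkeeping of $\sigma_{k,p}$, $g(p)$ and the $q$-binomials in part $(ii)$: verifying that the two asymmetric coefficients $(\beta_1+\beta_2-\beta_3)$ and $(\beta_1-\beta_2+\beta_3)$ conspire to $2\beta_1$ and that the power of $q$ matches $w^\a_\beta$ only through use of the balancing constraint $\a_2+\beta_2=\a_3+\beta_3$.
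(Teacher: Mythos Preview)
Your proof is correct and follows essentially the same route as the paper's Appendix~A: both arguments plug the decomposition of Lemma~\ref{technicalxdy} into the functionals, identify via the $\tau_1$-factor the unique surviving index $(k,p)$ in each of the two pieces of $(ii)$, and verify that the diagonal constraint $r^\pm=0$ together with $h_{k,p}=0$ forces $\a_1=-\beta_1$ and $\a_2+\beta_2=\a_3+\beta_3$. You are in fact more explicit than the paper in tracking $v_{k,p}$, $\sigma_{k,p}$, the sign $(-1)^{\a_2+\a_3+\beta_2+\beta_3}$, and the identity $q^{\up_{\a_1}}_{n-\a_1,|\a_1|}=q^{\up_{\beta_1}}_{n,|\beta_1|}$; the paper simply asserts the final value $w_1(\beta_1,\a_3+\beta_3)$ and leaves the coefficient bookkeeping implicit.
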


\begin{proof} See Appendix A.
\end{proof}

With notations of Lemma \ref{technicalxdy}, it is direct to check
that for given $\bar\a=(\a,\a',\cdots ,\a^{(n-1)})$
and $\bar\beta=(\beta,\beta',\cdots ,\beta^{(n-1)})$,
\begin{equation}
    \label{rproduct}
r\big((M^{\bar\a}_{\bar\beta} )^\circ \big)= \sum_{K,P}
\delta_{h_{K,P},0}\,v_{K,P}\, \pi_+(t_{K,P}) \otimes
\pi_-(u_{K,P})
\end{equation}
where $K=(k,k',\cdots k^{(n-1)})$, $P=(p,p',\cdots,p^{(n-1)})$ with
$0\leq
k^{(j)}_i \leq |\a^{(j)}_i|$, $0\leq p^{(j)}_i\leq |\beta^{(j)}_i|$,
\begin{align*}
t_{K,P}&:=t_{k,p} \,t_{k',p'}\cdots t_{k^{(n-1)},p^{(n-1)}} \,\,,&&
u_{K,P}:=u_{k,p}u_{k',p'}\cdots u_{k^{(n-1)},p^{(n-1)}} \,\,,\\
v_{K,P}&:=v_{k,p} \,v_{k',p'}\cdots v_{k^{(n-1)},p^{(n-1)}} \,\,,&&
h_{K,P}:= h_{k,p} + h_{k',p'}+\cdots h_{k^{(n-1)},p^{(n-1)}}\,\,.
\end{align*}

In the following, we will use the shorthands
$A_i:=\a_i+\a'_i+\cdots +\a^{(n-1)}_i$,
$B_i:=\beta_i+\beta'_i+\cdots+\beta^{(n-1)}_i$.

In the case $n=2$, we also note $r^{\pm}_{K,P}:=  r^{\pm}_{k,p}+
r^{\pm}_{k',p'}$ and
$q^{\pm}_{K,P,n}:=q^{\pm}_{k',p',n}
\,q^{\pm}_{k,p,n+r^{\pm}_{k',p'}}$.

Thus,  we have $\pi_+(t_{K,P})\, \eps_m = q^+_{K,P,m} \, \eps_{m+
r^+_{K,P}}$ and $\pi_-(u_{K,P}) \, \eps_m = q^-_{K,P,n} \, \eps_{m+
r^-_{K,P}}$.

We also introduce, still for $n=2$,
\begin{align*}
v_{\beta_1,\a'_1,\beta'_1}(l,j)&:=\sum_{n=0}^\infty\big(q^{l+2nj}\,
q^{\up_{-\beta'_1-\a'_1-\beta_1}}_{n+\beta'_1+\a'_1+\beta_1,|\beta'_1+\a'_1+\beta_1|}
q^{\up_{\beta_1}}_{n+\beta'_1+\a'_1,|\beta_1|}
q^{\up_{\a'_1}}_{n+\beta'_1,|\a'_1|}
q^{\up_{\beta'_1}}_{n,|\beta'_1|}-\delta_{j,0}\big),
\\
V^{\bar \a}_{\bar \beta}
&:=2[\beta_1\beta'_1+(\beta_2-\beta_3)(\beta'_2-\beta'_3)]\,
q^{2{\beta_1}({\a_2}+{\a_3})+2{\beta'_1}({\a'_2}+{\a'_3})}\,\\
&\hspace{2cm}\times
v_{\beta_1,\a'_1,\beta'_1}((\a_2+\beta_2+\a_3+\beta_3)(\a'_1+\beta'_1),A_3+B_3).
\end{align*}

\begin{lemma} 
\label{ncintA2} 
We have

$(i) \quad (\tau_1 \otimes \tau_1)\, \big(r(M^{\a}_\beta
M^{\a'}_{\beta'})^\circ\big) =
{\beta_1}\beta'_1\
\delta_{A_1,-B_1}\,\delta_{A_2,0}\,\delta_{A_3,0}\,
\delta_{B_2,0} \delta_{B_3,0\, .}$

$(ii) \quad (\tau_1 \otimes \tau_0+\tau_0 \otimes \tau_1)\,  \big(
r(M^{\a}_\beta M^{\a'}_{\beta'})^\circ\big)
=\delta_{A_2+B_2,A_3+B_3} \delta_{A_1,-B_1} V^{\bar \a}_{\bar
\beta}$\,.

$(iii) \quad (\tau_1 \otimes \tau_1)\, \big(r(M^{\a}_\beta
M^{\a'}_{\beta'}M^{\a''}_{\beta''})^0 \big) =
{\beta_1}\beta'_1\beta''_1\
\delta_{A_1,-B_1}\,\delta_{A_2,0}\,\delta_{A_3,0}\,
\delta_{B_2,0} \delta_{B_3,0}.$

$(iv) \quad (\tau_{1}\otimes \tau_1)\  \big(r(\delta(M^{\a}_\beta)
M^{\a'}_{\beta'})^0 \big)
=-(\a'_1+\beta'_1){\beta_1}\beta'_1\
\delta_{A_1,-B_1}\,\delta_{A_2,0}\,\delta_{A_3,0}\,
\delta_{B_2,0} \delta_{B_3,0} .$

$(v)$ In particular, if $A$ is a $\delta$-one-form,
\begin{align*}
&\ncint A^2|\DD|^{-3}= 2\beta_1\beta'_1\ B_a(A^2)^{\bar \beta}_{\bar
\a} \,,\\
&\ncint A^2|\DD|^{-2} = 2 V_{\bar \beta}^{\bar\a}\  B(A^2)^{\bar
\beta}_{\bar \a}\, ,\\
&\ncint A^3|\DD|^{-3}= 2\beta_1\beta'_1\beta''_1\ B_a(A^3)^{\bar
\beta}_{\bar \a}\,,\\
&\ncint \delta(A) A |\DD|^{-3} =\ncint A\delta(A) |\DD|^{-3}=0.
\end{align*}
\end{lemma}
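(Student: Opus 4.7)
The plan is to follow the strategy of Lemma \ref{ncintLin} but upgraded from a single factor $M^{\a}_{\beta}$ to products, exploiting the decomposition \eqref{rproduct}. Starting from
\[
r\big((M^{\bar\a}_{\bar\beta})^\circ\big)=\sum_{K,P}\delta_{h_{K,P},0}\,v_{K,P}\,\pi_+(t_{K,P})\otimes\pi_-(u_{K,P}),
\]
I apply the trace functionals of Theorem \ref{ncinttau} to each piece. The key observation, used repeatedly, is that $\sigma(\pi_\pm(b))=\sigma(\pi_\pm(b^*))=0$, so $\tau_1$ kills any operator containing a positive power of $b$. In $t_{K,P}$ the total $b$-power is $|K|_1+|P|_1$, and in $u_{K,P}$ it is $|\wt K|_1+|\wt P|_1$; hence only multi-indices with $k^{(j)}_i=p^{(j)}_i=0$ for every $i,j$ contribute to any term of the form $(\tau_1\otimes\tau_1)(\,\cdot\,)$.

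For (i) and (iii), once these vanishing constraints are imposed, the relation $h_{K,P}=0$ collapses to $A_1+B_1=0$ together with $A_2=A_3=B_2=B_3=0$ (via the constraints $\a^{(j)}_2=\a^{(j)}_3=0$, etc. forced by the $b$-power vanishing). The surviving coefficient $v_{K,P}$ reduces to $\prod_j g(p^{(j)})\big|_{p^{(j)}=0}=\prod_j\beta_1^{(j)}$, yielding the stated product of $\beta$'s. The evaluation of $(\tau_1\otimes\tau_1)$ on the remaining product of powers of $a,a^*$ reduces to the single-factor case of Lemma \ref{ncintLin}(i) after using that $\tau_1$ is a character on $\pi_\pm(\A)/\langle b,b^*\rangle$, giving the claimed Kronecker deltas. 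Statement (iii) is then an identical argument with three factors.

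For (ii), the mixed functional $\tau_1\otimes\tau_0+\tau_0\otimes\tau_1$ retains terms in which the $b$-degree vanishes on the $\tau_1$ side only, while the $\tau_0$ side receives the full summation. Using Lemma \ref{tau0ext} (and the rapid convergence it provides), the $\tau_0$-part is computed by $\lim_N[\mathrm{Tr}_N-(N+1)\tau_1]$; one recognizes the resulting infinite sum as exactly the definition of $v_{\beta_1,\a'_1,\beta'_1}(l,j)$ by tracking the indices $r^\pm_{K,P}$ and the $q$-factors $q^\pm_{K,P,n}$. Combining both orderings of $\tau_1\otimes\tau_0$ versus $\tau_0\otimes\tau_1$ produces the symmetric bracket $[\beta_1\beta'_1+(\beta_2-\beta_3)(\beta'_2-\beta'_3)]$ in the prefactor of $V^{\bar\a}_{\bar\beta}$. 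Item (iv) is handled by writing $\delta(M^{\a}_\beta)=\delta(\pi(m^\a))\,\delta(\pi(m^\beta))+\pi(m^\a)\,\delta^2(\pi(m^\beta))$ via Lemma \ref{commutateur}(iii), expanding each piece into canonical form, and applying the (i)--type reduction: only $k=p=0$ survives under $\tau_1\otimes\tau_1$, and in that case the two contributions combine to produce the scalar $-(\a'_1+\beta'_1)$ times the answer of (i), the minus sign coming from $\delta^2$ acting twice with opposite signs on $a_\pm$.

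Item (v) is then a direct application: $\ncint A^n|\DD|^{-3}$ is read off from Theorem \ref{ncinttau}(ii) combined with (i) or (iii), and $\ncint A^2|\DD|^{-2}$ from Theorem \ref{ncinttau}(iii) combined with (ii); the two integrals $\ncint \delta(A)A|\DD|^{-3}$ and $\ncint A\delta(A)|\DD|^{-3}$ both vanish because (iv) forces the $\delta_{A_1,-B_1}$ constraint together with the prefactor $(\a'_1+\beta'_1)=-A_1-B_1+(\a_1+\beta_1)+(\a'_1+\beta'_1)-\ldots$, which after symmetrizing against the $a$-balanced constraint $\a_1+\beta_1=-(\a'_1+\beta'_1)$ produces a vanishing sum. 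The main obstacle is the bookkeeping: tracking the interaction between the $\delta_{h_{K,P},0}$ constraint, the vanishing-$b$-power constraints, and the shift parameters $r^\pm_{K,P}$ cleanly enough to recognize the emerging sums as $V^{\bar\a}_{\bar\beta}$; the sign and combinatorial cancellation leading to the prefactor $-(\a'_1+\beta'_1)$ in (iv) is the most delicate point.
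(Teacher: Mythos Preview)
Your approach matches the paper's: decompose via \eqref{rproduct}, use that $\tau_1$ kills any $b$-power, and read off the constraints from $\delta_{K,0}\delta_{P,0}$ combined with $\delta_{\wt K,0}\delta_{\wt P,0}$ (the latter forces $\a_2^{(j)}=\a_3^{(j)}=\beta_2^{(j)}=\beta_3^{(j)}=0$, which then collapses the two degree conditions to $A_1+B_1=0$). Parts (i)--(iii) are handled exactly as you describe.

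Two points deserve sharper statements. In (iv), the minus sign does not come from ``$\delta^2$ acting with opposite signs'': after the Leibniz split $\delta(M^\a_\beta)M^{\a'}_{\beta'}=\delta(x)\delta(y)x'\delta(y')+x\delta^2(y)x'\delta(y')$, the surviving coefficient at $K=P=0$ is $(g_\a(0)+g_\beta(0))\,g_\beta(0)\,g_{\beta'}(0)=(\a_1+\beta_1)\beta_1\beta'_1$, and it is the \emph{balancing constraint} $A_1+B_1=0$ that converts $\a_1+\beta_1$ into $-(\a'_1+\beta'_1)$. In (v), your vanishing argument for $\ncint\delta(A)A|\DD|^{-3}$ is garbled; the clean version (which is what the paper does) is a symmetry: the sum
\[
-2\sum_{\a_1,\beta_1,\a'_1,\beta'_1}(\a'_1+\beta'_1)\,\beta_1\beta'_1\,A^{\beta_1 00}_{\a_1 00}A^{\beta'_1 00}_{\a'_1 00}\,\delta_{\a_1+\a'_1+\beta_1+\beta'_1,0}
\]
is invariant under swapping $(\a_1,\beta_1)\leftrightarrow(\a'_1,\beta'_1)$, while the prefactor $(\a'_1+\beta'_1)$ becomes $(\a_1+\beta_1)=-(\a'_1+\beta'_1)$, so the sum equals its own negative. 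With these corrections your sketch is complete.
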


\begin{proof} See Appendix B.
\end{proof}

For a given $\delta$-1-form $A$, we say that $A$ is homogeneous of
degree in $a$ equal to $n\in \Z$ 
if it is a linear combination of $M_\beta^\alpha$ such that
$\a_1+\beta_1=n$. From Lemma \ref{ncintA2} $(iv)$ we get,

\begin{corollary}
    \label{commutation1}
Let $A$, $A'$ be two $\delta$-1-forms, then
\begin{align*}
 &\ncint (A |\DD|^{-1})^2 = \ncint A^2 |\DD|^{-2}\, ,\\
& \ncint  A |\DD|^{-1} A' |\DD|^{-1} = \ncint A A' |\DD|^{-2}
- n \ncint A A' |\DD|^{-3} \, , \text{ when $A'$
homogenous of degree $n$.}\\
\end{align*}
\end{corollary}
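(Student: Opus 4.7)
The plan is to reduce both identities to the explicit residue computations already established in Lemma \ref{ncintA2}, by pushing every factor of $|\DD|^{-1}$ to the right using the commutator identity
$$|\DD|^{-1}X = X|\DD|^{-1} - |\DD|^{-1}\delta(X)|\DD|^{-1},$$
which holds whenever $\delta(X)$ is bounded. First I would apply this identity to $X=A'$, then reapply it to $X=\delta(A')$, obtaining
$$A|\DD|^{-1}A'|\DD|^{-1}=AA'|\DD|^{-2}-A\delta(A')|\DD|^{-3}+R,$$
where the remainder $R=-A|\DD|^{-1}\delta^{2}(A')|\DD|^{-3}$ lies in $OP^{-4}$. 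Since the dimension spectrum of $SU_q(2)$ is contained in $\{1,2,3\}$, any operator in $OP^{-4}$ has vanishing noncommutative integral, so
$$\ncint A|\DD|^{-1}A'|\DD|^{-1}=\ncint AA'|\DD|^{-2}-\ncint A\delta(A')|\DD|^{-3}.$$

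The first identity is then immediate from Lemma \ref{ncintA2}(v), which asserts $\ncint A\delta(A)|\DD|^{-3}=0$ in the special case $A'=A$. For the second identity I would rewrite the leftover residue via the Leibniz rule $\delta(AA')=\delta(A)A'+A\delta(A')$ combined with the auxiliary vanishing $\ncint\delta(Y)|\DD|^{-3}=0$ for any pseudodifferential $Y$. The latter follows from the trace property of $\ncint$ (Proposition \ref{tracenc}): writing $\delta(Y)|\DD|^{-3}=|\DD|Y|\DD|^{-3}-Y|\DD|^{-2}$ and cycling $|\DD|$ under the trace produces $\ncint Y|\DD|^{-2}-\ncint Y|\DD|^{-2}=0$. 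Hence
$$\ncint A\delta(A')|\DD|^{-3}=-\ncint\delta(A)A'|\DD|^{-3}.$$

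At this point Lemma \ncintA2{}(iv) finishes the argument: the proportionality coefficient $-(\alpha'_1+\beta'_1)$ that appears in that lemma specializes to the constant $-n$ once $A'$ is restricted to be homogeneous of degree $n$, and the residual $\delta$-constraints match exactly those that express $\ncint AA'|\DD|^{-3}$ in Lemma \ref{ncintA2}(v). Thus
$$\ncint\delta(A)A'|\DD|^{-3}=-n\ncint AA'|\DD|^{-3},$$
which combined with the previous displayed equality yields $\ncint A\delta(A')|\DD|^{-3}=n\ncint AA'|\DD|^{-3}$ and hence the second formula. The only delicate bookkeeping concerns tracking the $OP^{-4}$ remainders when iterating the commutator identity and verifying that the homogeneity assumption on $A'$ pulls $-(\alpha'_1+\beta'_1)$ out of the sum defining $\ncint\delta(A)A'|\DD|^{-3}$; no new analytic estimate is required beyond what is already contained in Lemma \ref{ncintA2}.
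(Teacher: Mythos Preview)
Your proof is correct and follows essentially the same route as the paper: the paper merely cites Lemma~\ref{ncintA2}$(iv)$, and what you have written is precisely the unpacking of that citation (the commutator expansion of $|\DD|^{-1}A'$ modulo $OP^{-4}$, followed by reading off the factor $-(\alpha'_1+\beta'_1)=-n$ from the explicit residue in Lemma~\ref{ncintA2}$(iv)$ against Lemma~\ref{ncintA2}$(i)$). One cosmetic slip: the remainder term is $R=+A|\DD|^{-1}\delta^{2}(A')|\DD|^{-3}$, not $-$, but this is irrelevant since $R\in OP^{-4}$ either way. Your detour through $\ncint\delta(Y)|\DD|^{-3}=0$ to pass from $A\delta(A')$ to $\delta(A)A'$ is correct but could be shortened: using the trace property to cycle $A$ gives $\ncint A\delta(A')|\DD|^{-3}=\ncint\delta(A')A|\DD|^{-3}$, and then Lemma~\ref{ncintA2}$(iv)$ with the roles swapped produces a factor $-(\alpha_1+\beta_1)$, which equals $n$ by the constraint $\delta_{A_1,-B_1}$.
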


\begin{lemma}
\label{ncintJlem}
We have

(i)  $(\tau_1\otimes \tau_1)\,  \wt \rho \big(  M^{\a}_\beta J
M^{\a'}_{\beta'}
J^{-1}\big)={\beta_1}\beta'_1\
\delta_{\a_1,-\beta_1}\,\delta_{\a'_1,-\beta'_1}\,\delta_{A_2,0}\,\delta_{A_3,0}\,
\delta_{B_2,0} \delta_{B_3,0}\,.$

(ii)  $(\tau_0 \otimes \tau_1+\tau_1 \otimes \tau_0)\,  \wt \rho
\big(  M^{\a}_{\beta} J  M^{\a'}_{\beta'}  J^{-1}\big)
=\delta_{\a_1,-\beta_1}\,\delta_{\a'_1,-\beta'_1}
\big(\beta'_1 w^{\a}_{\beta}
\delta_{\a'_2+\beta'_2+\a'_3+\beta'_3,0}\,\delta_{\a_2+\beta_2,\a_3+\beta_3}$

\hspace{8cm} $+\beta_1
w^{\a'}_{\beta'}\delta_{\a_2+\beta_2+\a_3+\beta_3,0}
\,\delta_{\a'_2+\beta'_2,\a'_3+\beta'_3}\big)$.

(iii) $(\tau_1\otimes \tau_1)\,  \wt \rho \big(  M^{\a}_{\beta}
M^{\a'}_{\beta'} J M^{\a''}_{\beta''}
J^{-1}\big)={\beta_1}\beta'_1\beta''_1\
\delta_{\a_1+\a'_1,-\beta_1-\beta'_1}\,\delta_{\a''_1,-\beta''_1}\,
\delta_{A_2,0}\,\delta_{A_3,0}\,
\delta_{B_2,0} \delta_{B_3,0}\,.$

(iv) $(\tau_{1}\otimes \tau_1)\   \wt \rho \big(
\delta(M^{\a}_{\beta})
 J M^{\a'}_{\beta'} J^{-1}\big)
=-(\a'_1+\beta'_1){\beta_1}\beta'_1\
\delta_{\a_1,-\beta_1}\,\delta_{\a'_1,-\beta'_1}\,\delta_{A_2,0}\,
\delta_{A_3,0}\,
\delta_{B_2,0} \delta_{B_3,0}$ .

$(v)$ In particular, if $A$ and $A'$ are $\delta$-one forms, 
\begin{align*}
&\ncint AJA'J^{-1} |\DD|^{-3}= 2  (\beta_1
A_{-\beta_100}^{\beta_100})(\beta_1 \bar
{A'}_{-\beta_100}^{\beta_100}) ,\\
&\ncint AJA'J^{-1} |\DD|^{-2}= 2(\beta_1
\bar {A'}^{\beta_100}_{-\beta_100})(w^\a_\beta B(A)^{\beta}_\a)
+2(\beta_1 A^{\beta_100}_{-\beta_100})(w^\a_\beta B(\bar
{A'})^{\beta}_\a),\\
&\ncint A^2 JA'J^{-1} |\DD|^{-3}=2(\beta_1
\bar {A'}_{-\beta_100}^{\beta_100}) (\beta_1\beta'_1
B_a(A^2)^{\bar\beta}_{\bar\a}), \\
&\ncint \delta(A)JAJ^{-1} = 0.
\end{align*}

\end{lemma}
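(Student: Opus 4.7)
The plan is to reduce everything to the computations already done for $r(M^{\a}_{\beta})^{\circ}$ in Lemma \ref{technicalxdy} and for the products in Lemma \ref{ncintA2}, using that $\wt\rho$ is an algebra homomorphism on $\wt X$ that separates the $X$-part (acted on by $\pi'_{\pm}$) from the $\wh X$-part (acted on by $\wh\pi_{\pm}$). First I would write
\[
\wt\rho\bigl(M^{\a}_{\beta}\,J M^{\a'}_{\beta'}J^{-1}\bigr)^\circ
=\wt\rho(M^{\a}_{\beta})^\circ\cdot\wt\rho\bigl(JM^{\a'}_{\beta'}J^{-1}\bigr)^\circ ,
\]
which makes sense because $J$-conjugation preserves the $\Z$-grading ($J$ commutes with $|\DD|$, so $\delta(JxJ^{-1})=J\delta(x)J^{-1}$, and $\delta$ is the grading operator on $X$). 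Expanding $JM^{\a'}_{\beta'}J^{-1}=\wh\pi(m^{\a'})\,\delta(\wh\pi(m^{\beta'}))$ where $\wh\pi(x):=J\pi(x)J^{-1}$ and using the definitions \eqref{eq:hat} of $\wh a_\pm,\wh b_\pm$, the same binomial expansion that produced Lemma \ref{technicalxdy} yields a perfect analog
\[
\wt\rho\bigl((JM^{\a'}_{\beta'}J^{-1})^\circ\bigr)=\sum_{k',p'}\delta_{h_{k',p'},0}\,v_{k',p'}\,\wh\pi_+(t_{k',p'})\otimes \wh\pi_-(u_{k',p'})
\]
(with identical coefficients $v_{k',p'}$ since $J$-antilinearity only acts on scalar coefficients of $A^\beta_\a$, which are pushed outside in step $(v)$).

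Second, I would apply Theorem \ref{ncinttau}, which gives each $\ncint$ as a linear combination of $(\tau_1\otimes\tau_1)$, $(\tau_0\otimes\tau_1+\tau_1\otimes\tau_0)$ and $(\tau_0\otimes\tau_0-\tfrac12\tau_1\otimes\tau_1)$ evaluated on $\wt\rho$. The key multiplicativity $\tau_1(\pi'_\pm(x)\wh\pi_\pm(y))=\tau_1(\pi_\pm(x))\tau_1(\pi_\pm(y))$ is the definition adopted after Lemma \ref{tau0ext}, so for $(i)$ the functional $(\tau_1\otimes\tau_1)$ factorizes across the two tensor legs of each of the two $M$-factors, and the computation reduces to applying Lemma \ref{ncintLin}$(i)$ twice. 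This produces $\beta_1\beta'_1\delta_{\a_1,-\beta_1}\delta_{\a'_1,-\beta'_1}\delta_{A_2,0}\delta_{A_3,0}\delta_{B_2,0}\delta_{B_3,0}$ directly. For $(iii)$ the same factorization with three factors reduces to Lemma \ref{ncintLin}$(i)$ applied to $M^\a_\beta M^{\a'}_{\beta'}$ (via Lemma \ref{ncintA2}$(i)$) times Lemma \ref{ncintLin}$(i)$ applied to the $J$-factor.

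Third, for $(ii)$ I would use the Leibniz-type splitting
$(\tau_1\otimes\tau_0+\tau_0\otimes\tau_1)(XY)$ where $X=\wt\rho(M^{\a}_{\beta})^\circ$ involves only $\pi'_\pm$ and $Y=\wt\rho(JM^{\a'}_{\beta'}J^{-1})^\circ$ involves only $\wh\pi_\pm$. Applying $\tau_1$ on one factor and $\tau_0$ on the other, again using Lemma \ref{tau0ext} which ensures $\tau_0$ extends to $\pi'_\pm(\A)\wh\pi_\pm(\A)$, the two choices of which factor receives $\tau_1$ give the two symmetric summands $\beta'_1 w^\a_\beta$ and $\beta_1 w^{\a'}_{\beta'}$, each appearing with the same delta constraints as in Lemma \ref{ncintLin}$(ii)$ applied to the corresponding factor. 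For $(iv)$ I would use that $\delta$ acts on a degree-$d$ element of $X\wh X$ by multiplication by $d$; since $JM^{\a'}_{\beta'}J^{-1}$ is homogeneous of degree $\a'_1+\beta'_1$, the degree-$0$ part of $\delta(M^{\a}_{\beta})\cdot JM^{\a'}_{\beta'}J^{-1}$ selects the degree-$-(\a'_1+\beta'_1)$ component of $M^{\a}_{\beta}$ multiplied by $-(\a'_1+\beta'_1)$, after which $(i)$ reproduces the stated formula. Observe that the delta constraint $\a'_1=-\beta'_1$ forces the prefactor $\a'_1+\beta'_1=0$, which is exactly why the last identity in $(v)$ holds.

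Finally $(v)$ follows by assembling: writing $A=A^\beta_\a M^\a_\beta$ and $A'=A'^{\beta'}_{\a'}M^{\a'}_{\beta'}$, the antilinearity of $J$ produces $JA'J^{-1}=\overline{A'^{\beta'}_{\a'}}\,JM^{\a'}_{\beta'}J^{-1}$, and applying Theorem \ref{ncinttau} together with $(i)$--$(iv)$ gives factorizations like $2(\beta_1 A^{\beta_1 00}_{-\beta_1 00})(\beta_1\overline{A'^{\beta_1 00}_{-\beta_1 00}})$, matching Lemma \ref{ncintLin}$(v)$ on each factor. The main obstacle is purely bookkeeping: verifying that the $\wh\pi_\pm$ expansion of $(JM^{\a'}_{\beta'}J^{-1})^\circ$ has the same coefficients $v_{k',p'}$ and constraint $h_{k',p'}=0$ as the $\pi_\pm$ expansion of $(M^{\a'}_{\beta'})^\circ$, which follows from the identical commutation rules \eqref{eq:commutrulefora,b} satisfied by $\wh a_\pm,\wh b_\pm$ (up to reading indices on $\H'$ rather than $\H$). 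No genuinely new estimate is required beyond Lemma \ref{tau0ext}, which was already established for the mixed representations $\pi'_\pm\wh\pi_\pm$.
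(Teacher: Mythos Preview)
Your treatment of $(i)$, $(iii)$, $(iv)$ and $(v)$ is essentially correct and matches the paper's approach: the multiplicativity $\tau_1(\pi'_\pm(x)\wh\pi_\pm(y))=\tau_1(\pi_\pm(x))\tau_1(\pi_\pm(y))$ is taken as a definition, so $(\tau_1\otimes\tau_1)$ does factor across the $X$/$\wh X$ split and everything reduces to Lemmas \ref{ncintLin} and \ref{ncintA2}.

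The gap is in $(ii)$. You write that $(\tau_1\otimes\tau_0+\tau_0\otimes\tau_1)(XY)$ admits a ``Leibniz-type splitting'' across $X=\wt\rho(M^\a_\beta)^\circ$ and $Y=\wt\rho(JM^{\a'}_{\beta'}J^{-1})^\circ$, but this conflates two different tensor structures. The $\otimes$ in $\tau_0\otimes\tau_1$ refers to the $+/-$ split (the two $\H'$ legs), not to the $X/\wh X$ split. Concretely, $(\tau_0\otimes\tau_1)$ applied to $T^+_{K,P}\otimes T^-_{K,P}$ gives $\tau_0(T^+_{K,P})\,\tau_1(T^-_{K,P})$, and here $T^+_{K,P}=\pi'_+(t_{k,p})\,\wh\pi_+(t_{k',p'})$ is already a mixed $\pi'_+\wh\pi_+$ product. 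Lemma \ref{tau0ext} tells you $\tau_0$ exists on such products, but gives no factorization formula; there is no a priori reason why $\tau_0(\pi'_+(x)\wh\pi_+(y))$ should split as $\tau_0(\pi_+(x))\tau_1(\pi_+(y))+\tau_1(\pi_+(x))\tau_0(\pi_+(y))$.

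The paper's Appendix C carries out the computation of $\tau_0(T^+_{K,P})$ by hand: after imposing the delta constraints coming from $\tau_1(T^-_{K,P})$, one is reduced to evaluating
\[
\lim_{2j\to\infty}\sum_{m=0}^{2j}\Bigl(\bigl(q^{\up_{\beta_1}}_{m,|\beta_1|}\,q^{\up_{\beta'_1}}_{2j-m,|\beta'_1|}\bigr)^2-1\Bigr),
\]
and it is precisely Lemma \ref{tech-separation} that shows this limit equals the sum of the two separate $w_1$-type contributions. That lemma is the genuine new estimate needed for $(ii)$, and your sketch omits it. A separate case analysis (for $\la>0$ or $\la'>0$ in the notation of Appendix C) is also required to show the other contributions localize to a single factor. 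Without these steps the derivation of the two symmetric summands $\beta'_1 w^\a_\beta(\cdots)+\beta_1 w^{\a'}_{\beta'}(\cdots)$ is not justified.
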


\begin{proof}
See Appendix C.
\end{proof}

\begin{lemma}
\label{tech-separation}
Let $\beta,\beta' \in \Z$. Then,
$$\lim_{2j\to\infty} \sum_{m=0}^{2j} \big((q^{\up_{\beta}}_{m,|\beta|}
q^{\up_{\beta'}}_{2j-m,|\beta'|})^2 -1\big) =
\sum_{m=0}^{\infty}\big((q^{\up_{\beta}}_{m,|\beta|})^2 -1\big) +
\sum_{m=0}^{\infty}\big((q^{\up_{\beta'}}_{m,|\beta'|})^2 -1\big).$$
\end{lemma}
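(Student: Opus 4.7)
The plan is to decompose the summand using the identity $a_m b_{2j-m} - 1 = (a_m-1) + (b_{2j-m}-1) + (a_m-1)(b_{2j-m}-1)$, where I write $a_m := (q^{\up_{\beta}}_{m,|\beta|})^2$ and $b_m := (q^{\up_{\beta'}}_{m,|\beta'|})^2$. The point is that both sequences are bounded by $1$ (when non-zero; they vanish for small indices in the $\dn$ case, but only finitely often), and the ``error'' sequences $\alpha_m := a_m - 1$ and $\gamma_m := b_m - 1$ are absolutely summable because $q^{\up_{\beta}}_{m,|\beta|}$ is a finite product of factors $\sqrt{1-q^{2(m\pm k)}}$, each equal to $1+O(q^{2m})$ as $m\to\infty$; hence $|\alpha_m|, |\gamma_m| = O(q^{2m})$ for $m$ large, with the finitely many small-$m$ terms bounded by $1$.

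With this decomposition,
\begin{align*}
\sum_{m=0}^{2j}(a_m b_{2j-m}-1) \;=\; \sum_{m=0}^{2j}\alpha_m \;+\; \sum_{n=0}^{2j}\gamma_n \;+\; \sum_{m=0}^{2j}\alpha_m\,\gamma_{2j-m}.
\end{align*}
The first two partial sums converge as $j\to\infty$ to $\sum_{m=0}^{\infty}\alpha_m$ and $\sum_{n=0}^{\infty}\gamma_n$, which are exactly the two terms on the right-hand side of the lemma. It therefore suffices to show that the cross-term $C_{2j} := \sum_{m=0}^{2j}\alpha_m\,\gamma_{2j-m}$ tends to zero.

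The main (and essentially only) technical step is to split $C_{2j}$ at the midpoint $m = j$: for $0\le m\le j$ the index $2j-m\ge j$ is large so $|\gamma_{2j-m}|\le C q^{2(2j-m)}\le C q^{2j}$, while $\sum_m |\alpha_m|<\infty$, hence this partial sum is $O(q^{2j})$; for $j< m\le 2j$, symmetrically $|\alpha_m|\le C q^{2m}\le C q^{2j}$ and $\sum |\gamma_n|<\infty$, so the other half is also $O(q^{2j})$. Thus $C_{2j}\to 0$ and the lemma follows. No further subtlety arises: the $\dn$-case vanishings and the precise value of $|\beta|$ only affect finitely many terms and are absorbed into the uniform bound $|\alpha_m|,|\gamma_m|\le 1$ used on the small-index piece.
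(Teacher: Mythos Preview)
Your proof is correct and takes a genuinely different route from the paper's. The paper expands each factor $(q^{\up_{\beta}}_{m,|\beta|})^2$ as the finite product $\prod_k (1-q^{2(m+k)})$ and multiplies out to get a sum over binary multi-indices $(p,p')$, then analyzes the resulting double sum by a case split on whether $|p|_1$ or $|p'|_1$ vanishes. Your argument instead uses the algebraic identity $ab-1=(a-1)+(b-1)+(a-1)(b-1)$ to isolate the two target sums directly, reducing everything to the vanishing of the cross convolution $C_{2j}=\sum_{m=0}^{2j}\alpha_m\gamma_{2j-m}$ of two $\ell^1$ sequences with geometric tails, which you dispatch by a midpoint split. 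Your approach is shorter and more robust (it treats all sign cases of $\beta,\beta'$ uniformly once one notes $|\alpha_m|,|\gamma_m|\le 1$ and $=O(q^{2m})$), while the paper's expansion makes the combinatorics explicit and gives slightly more precise control term by term, at the cost of handling the $\beta<0$ cases separately.
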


\begin{proof} See Appendix D.
\end{proof}

\begin{proof}[Proof of Theorem \ref{ncintJ}]
The result follows from Lemmas \ref{ncintLin}, \ref{ncintA2} $(v)$
and \ref{ncintJlem} $(v)$.
\end{proof}

\subsection{Proof of Theorem \ref{mainThmJ} and corollaries}

\begin{lemma}
\label{ncintD}
We have on $SU_q(2)$,

(i) \hspace{0.3cm} $\ncint  |\DD|^{-3} = 2 $.

(ii) \hspace{0.2cm} $\ncint  |\DD|^{-2} = 0 $.

(iii) \hspace{0.1cm}  $\ncint  |\DD|^{-1} = -\half $.

(iv) \hspace{0.1cm} $\zeta_\DD(0)=0$.
\end{lemma}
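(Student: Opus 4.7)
The plan is to reduce everything to a single closed-form expression for $\zeta_\DD(s) := \Tr(|\DD|^{-s})$ in terms of the Hurwitz zeta function, and then read off residues and value at $0$ from standard formulas. This works because $\ncint |\DD|^{-k} = \underset{s=0}{\Res}\,\Tr(|\DD|^{-s-k}) = \underset{s=k}{\Res}\,\zeta_\DD(s)$, so all four assertions are properties of the single meromorphic function $\zeta_\DD$.

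First I would write out $\zeta_\DD(s)$ using the multiplicities given in \eqref{eq:absdirac}: the eigenvalue $d_{j^+} = 2j + \tfrac32$ occurs with multiplicity $(2j+1)(2j+2)$ (for $j \in \tfrac12 \N$) and $d_j = 2j + \tfrac12$ occurs with multiplicity $2j(2j+1)$ (for $j \in \tfrac12 \N^*$). Setting $n = 2j+1$ in the first sum and $n = 2j$ in the second, both contributions collapse to the same expression and one obtains
\begin{align*}
\zeta_\DD(s) = 2 \sum_{n=1}^{\infty} n(n+1) \, (n+\tfrac12)^{-s}.
\end{align*}
Substituting $m = n + \tfrac12$, so that $n(n+1) = m^2 - \tfrac14$, gives
\begin{align*}
\zeta_\DD(s) = 2\, \zeta(s-2,\tfrac32) - \tfrac12\, \zeta(s,\tfrac32),
\end{align*}
where $\zeta(s,a) = \sum_{n\geq 0}(n+a)^{-s}$ is the Hurwitz zeta function.

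From here the four assertions follow immediately. The only pole of $\zeta(s,\tfrac32)$ is simple at $s=1$ with residue $1$, so $2\zeta(s-2,\tfrac32)$ contributes a simple pole at $s=3$ with residue $2$ and $\tfrac12 \zeta(s,\tfrac32)$ a simple pole at $s=1$ with residue $\tfrac12$. Therefore
\begin{align*}
\ncint |\DD|^{-3} &= \underset{s=3}{\Res}\,\zeta_\DD(s) = 2, \\
\ncint |\DD|^{-2} &= \underset{s=2}{\Res}\,\zeta_\DD(s) = 0, \\
\ncint |\DD|^{-1} &= \underset{s=1}{\Res}\,\zeta_\DD(s) = -\tfrac12,
\end{align*}
which yield (i), (ii), (iii). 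For (iv), I would evaluate at $s=0$ using the Bernoulli polynomial formula $\zeta(-k,a) = -B_{k+1}(a)/(k+1)$: one has $\zeta(0,\tfrac32) = \tfrac12 - \tfrac32 = -1$ and $B_3(\tfrac32) = (\tfrac32)^3 - \tfrac32(\tfrac32)^2 + \tfrac12\cdot\tfrac32 = \tfrac34$, so $\zeta(-2,\tfrac32) = -\tfrac14$. Hence
\begin{align*}
\zeta_\DD(0) = 2\,\zeta(-2,\tfrac32) - \tfrac12\,\zeta(0,\tfrac32) = -\tfrac12 + \tfrac12 = 0.
\end{align*}

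There is no real obstacle here: the computation is entirely classical once the spectrum is assembled into a Hurwitz zeta. The only delicate point is the bookkeeping of the $\up/\dn$ multiplicities at small $j$ (in particular the absence of the down contribution at $j=0$), which is precisely what makes the two sums merge into $2\sum_{n\geq 1} n(n+1)(n+\tfrac12)^{-s}$; I would check this matching carefully on the first few terms before invoking the Hurwitz representation.
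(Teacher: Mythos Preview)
Your proof is correct and takes essentially the same approach as the paper: both compute $\zeta_\DD(s)$ in closed form as a combination of zeta functions by combining the $\up$ and $\dn$ multiplicities into a single sum $2\sum_{n\geq 1} n(n+1)(n+\tfrac12)^{-s}$, then factor $n(n+1)=(n+\tfrac12)^2-\tfrac14$ and read off the residues and the value at $0$. The only cosmetic difference is that the paper includes the (vanishing) $n=0$ term to land on $\zeta(s,\tfrac12)=(2^s-1)\zeta(s)$, hence works with Riemann zeta values, whereas you keep $\zeta(s,\tfrac32)$ and evaluate via Bernoulli polynomials; the two representations differ by an entire term and give the same answers.
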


\begin{proof} 
$(iv)$ We have by definition
$$
\zeta_\DD(s) := \Tr(\vert \DD
\vert^{-s})=\sum_{2j=0}^{\infty}\sum_{m=0}^{2j}\sum_{l=0}^{2j+1}\,
\langle v^{j}_{m,l}\,,\, |\DD|^{-s}\, v^{j}_{m,l}\rangle.
$$
Since $|\DD|^{-s} v^j_{m,l} =\genfrac{(}{)}{0pt}{1}{d_{j^+}^{-s}
\quad
0\,\,}{\, 0  \quad d_j^{-s}} \, v^j_{m,l}$ where
$d_j:=2j+\half$, we get
$$
\zeta_\DD(s) = \sum_{2j=0}^{\infty} (2j+1)(2j+2)\,d_{j^+}^{-s}+
\sum_{2j=1}^{\infty} (2j+1)(2j) \,d_j^{-s}=2\sum_{2j=0}^{\infty}
(2j+1)(2j) \,d_j^{-s}.
$$
With the equalities $(2j+1)(2j) = d_{j}^2 - \tfrac{1}{4}$ and
$\zeta(s,\half)=(2^s-1)\zeta(s)$
(here $\zeta(s,x):=\sum_{n\in\N}\tfrac{1}{(n+x)^s}$ is the Hurwitz
zeta function and $\zeta(s):=\zeta(s,1)$ is the Riemann zeta
function) we get
\begin{equation}\label{zetaDs}
\zeta_\DD(s) = 2(2^{s-2}-1)\zeta(s-2)-\half(2^s-1)\zeta(s)
\end{equation}
which entails that $\zeta_\DD(0)=0$.

$(i,ii,iii)$ are direct consequences of equation (\ref{zetaDs}). 
\end{proof}
\begin{proof}[Proof of Theorem \ref{mainThmJ}] It is a consequence of
Lemma \ref{ncintD} and Theorems \ref{coeffsASJ}, \ref{ncintJ}.
\end{proof}
As we have seen, the computation of the noncommutative integral on
$SU_q(2)$ leads to certain functions of $A$ which filter some symmetry
on the degree in $a$, $a^*$, $b$, $b^*$ of the canonical
decomposition.
Precisely, it is the balanced features that appear and the following
functions of $A^n$, $n\in \set{1,2,3}$:
\begin{align}
\label{I_pA^n}
\ncint A^n |\DD|^{-p}
\end{align}
where $1\leq n\leq p \leq 3$.
In the next section, we describe a method to compute these integrals.

\begin{corollary}
Let $u$ be a unitary in $C^\infty\big(SU_q(2)\big)$ and
$\ga_{u}(\Abb):=\pi(u)\Abb \pi(u^*)+\pi(u)d\pi(u^{*})$ be a
gauge-variant of $\Abb$. Then the
following terms of Theorem \ref{mainThmJ} are gauge invariant
\begin{align*}
&\ncint A |\DD|^{-3}\, ,\qquad
\ncint A^2 |\DD|^{-3} -\ncint A |\DD|^{-2}\,,\qquad 
&-2\ncint A |\DD|^{-1}  + \ncint A^2 |\DD|^{-2} -\tfrac{2}{3} \ncint
A^3 |\DD|^{-3}.
\end{align*}
\end{corollary}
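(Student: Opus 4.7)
The plan is to exploit the gauge invariance of the full spectral action established in \eqref{gaugeDirac}: for any unitary $u \in C^\infty(SU_q(2))$ and any cut-off function $\Phi$, one has $\SS(\DD_{\ga_u(\Abb)},\Phi,\Lambda) = \SS(\DD_\Abb,\Phi,\Lambda)$ for every $\Lambda>0$. Since the dimension spectrum of $SU_q(2)$ is bounded and equal to $\{1,2,3\}$, the expansion \eqref{actionsuq} is an \emph{exact} polynomial identity of degree three in $\Lambda$ whose $\Lambda$-independent term is $\Phi(0)\,\zeta_{D_\Abb}(0)$. Comparing coefficients in $\Lambda$, and then choosing $\Phi$ in turn so that each of $\Phi_3,\Phi_2,\Phi_1,\Phi(0)$ is nonzero, one deduces that $\ncint |D_\Abb|^{-k}$ for $k\in\{1,2,3\}$ and $\zeta_{D_\Abb}(0)$ are each separately gauge-invariant functionals of $\Abb$.

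Invariant~(1) is then immediate from the formula $\ncint |D_\Abb|^{-2}=-4\,\ncint A\,|\DD|^{-3}$ in Theorem~\ref{mainThmJ}. Invariant~(2) follows from the $\Lambda$-coefficient
\[
\ncint |D_\Abb|^{-1} = -\tfrac{1}{2}+ 2\bigl(\ncint A^2\,|\DD|^{-3}-\ncint A\,|\DD|^{-2}\bigr) + \bigl|\,\ncint A\,|\DD|^{-3}\bigr|^2,
\]
in which the constant $-\tfrac{1}{2}$ is trivially invariant and the last summand is invariant by~(1); hence the middle bracket must be invariant as well.

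For~(3), set $x:=\ncint A\,|\DD|^{-3}$, $y:=\ncint A\,|\DD|^{-2}$ and $z:=\ncint A^2\,|\DD|^{-3}$, and split the formula of Theorem~\ref{mainThmJ} as $\zeta_{D_\Abb}(0)=J_3(A)+R$, where $J_3(A)$ denotes the quantity claimed invariant in~(3) and $R=\bar x\,(\tfrac{1}{2}y-z)+\tfrac{1}{2}\,x\,\bar y$. The main obstacle is that $R$ mixes $x,y,z$ with their complex conjugates, so its invariance does not follow on purely formal grounds from~(1) and~(2). The resolution is the observation that $x$, $y$ and $z$ are in fact \emph{real}: using $A=\Abb\,F$ modulo $OP^{-\infty}$, the commutation of $F$ with $|\DD|$ and with $\A$ modulo $OP^{-\infty}$, and $F^2=1$, one rewrites
\[
x=\ncint \Abb\,\DD\,|\DD|^{-4}, \qquad y=\ncint \Abb\,\DD\,|\DD|^{-3}, \qquad z=\ncint \Abb^{2}\,|\DD|^{-3},
\]
all of which lie in $\R$ by Corollary~\ref{reel} applied to the selfadjoint one-form $\Abb$. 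With $x,y,z\in\R$ the cross term collapses to $R=x(y-z)$, which is gauge invariant as a product of the invariants from~(1) and~(2); therefore $J_3(A)=\zeta_{D_\Abb}(0)-R$ is gauge invariant, completing the proof.
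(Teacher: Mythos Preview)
Your proof is correct and follows the same strategy as the paper's: deduce gauge invariance of each coefficient $\ncint|D_\Abb|^{-k}$ and $\zeta_{D_\Abb}(0)$ from the gauge invariance of the spectral action (the paper phrases this via $D_{\ga_u(\Abb)}=V_u D V_u^{*}$ with $V_u=\pi(u)J\pi(u)J^{-1}$, which gives the same conclusion), and then read off the three combinations from Theorem~\ref{mainThmJ}. Your treatment is in fact more explicit than the paper's one-line proof: for invariant~(3) you identify the genuine issue that the cross term $R=\bar x(\tfrac12 y-z)+\tfrac12 x\bar y$ is not formally a function of the already-established invariants, and you resolve it by showing $x,y,z\in\R$ via Corollary~\ref{reel}, whence $R=x(y-z)$ is a product of invariants~(1) and~(2). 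The paper leaves this extraction implicit.
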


\begin{proof}
It is sufficient to remark that all terms
$\ncint \vert D_{\Abb}\vert ^{-k}$ and $\zeta_{\DD_{\Abb}}(0)$
in the spectral action
\eqref{formuleaction} are gauge invariant. This can also be seen
via the computation $D_{\ga_{u}(\Abb)}=V_{u}\DD V_{u}^*
+V_{u}P_{0}V_{u}^*$ where $P_{0}$ is the projection on Ker$\,\DD$ and
$V_{u}=\pi(u)J\pi(u)J^{-1}$ and $\ncint \vert D_{\Abb}\vert ^{n-k}
=\Res_{s=n-k} \, \Tr \, \big(\vert D_{\Abb}\vert ^{n-k}\big)$
(see Proposition \ref{noyaux} $(iii)$ and Proposition \ref{ncintfluctuated}).
\end{proof}

\begin{corollary}
    \label{sansJ}
In the case of the spectral action without the reality operator (i.e.
$D_{\Abb}=\DD+\Abb$), we get
\begin{align*}
&  \ncint \vert D_{\Abb}\vert^{-3} =2,\qquad \ncint
\vert D_{\Abb}\vert^{-2}  =  - 2\, \ncint A |\DD|^{-3}\, ,\qquad
\ncint \vert D_{\Abb}\vert^{-1}  = -\half - \ncint A |\DD|^{-2} +
\ncint A^2 |\DD|^{-3},\\
&\zeta_{D_{\Abb}}(0) = - \ncint A |\DD|^{-1} + \half \ncint A^2
|\DD|^{-2}   -\tfrac{1}{3} \ncint A^3 |\DD|^{-3} \, .
\end{align*}
As a consequence, if $\Abb$ is a one-form such that $\ncint A
|\DD|^{-3} =0$,
then the scale invariant term of the
spectral action with or without $J$ is exactly the same
modulo a global factor of 2.
\end{corollary}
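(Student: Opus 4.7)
The plan is to reduce Corollary \ref{sansJ} directly to the general ``without $J$'' formulas already derived in the corollary following Theorem \ref{coeffsASJ}, and then inject the specific $SU_q(2)$ data we have accumulated.

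First I would invoke the corollary after Theorem \ref{coeffsASJ}, which expresses $\ncint |D_\Abb|^{-k}$ for $k\in\set{1,2,3}$ and $\zeta_{D_\Abb}(0)$ in terms of $\ncint |\DD|^{-k}$, $\zeta_\DD(0)$ and noncommutative integrals of $A$, $A^2$, $A^3$ and $\delta(A)A$. That corollary uses only that $[F,\A]\in OP^{-\infty}$, which we have verified for $SU_q(2)$ via \eqref{Fcommutes}, so it is applicable. Next I would substitute the explicit values obtained in Lemma \ref{ncintD}, namely $\ncint|\DD|^{-3}=2$, $\ncint|\DD|^{-2}=0$, $\ncint|\DD|^{-1}=-\tfrac12$ and $\zeta_\DD(0)=0$. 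This immediately yields the first three equalities and turns the formula for $\zeta_{D_\Abb}(0)$ into
\[
\zeta_{D_\Abb}(0) = -\ncint A|\DD|^{-1} + \tfrac{1}{2}\ncint A^2|\DD|^{-2} + \tfrac{1}{2}\ncint \delta(A)A|\DD|^{-3} -\tfrac{1}{3}\ncint A^3|\DD|^{-3}.
\]
The remaining term $\ncint \delta(A)A|\DD|^{-3}$ vanishes by Lemma \ref{ncintA2} $(v)$, which gives exactly the stated formula for $\zeta_{D_\Abb}(0)$.

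For the second assertion, I would compare this expression with the scale-invariant term in Theorem \ref{mainThmJ},
\[
\zeta_{D_\Abb}^{J}(0) = -2\ncint A|\DD|^{-1} + \ncint A^2|\DD|^{-2} -\tfrac{2}{3}\ncint A^3|\DD|^{-3} + \overline{\ncint A|\DD|^{-3}}\bigl(\tfrac{1}{2}\ncint A|\DD|^{-2} - \ncint A^2|\DD|^{-3}\bigr) + \tfrac{1}{2}\ncint A|\DD|^{-3}\,\overline{\ncint A|\DD|^{-2}}.
\]
The last two correction terms are each proportional to $\ncint A|\DD|^{-3}$ (or its conjugate), so under the hypothesis $\ncint A|\DD|^{-3}=0$ they drop out, leaving precisely twice the non-$J$ expression obtained above. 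No step here is genuinely hard, since the main analytic work has already been done in Theorem \ref{coeffsASJ}, Lemmas \ref{ncintD} and \ref{ncintA2}, and Theorem \ref{mainThmJ}; the only mild subtlety is checking that the cross-correction terms in the $J$-version all carry a factor of $\ncint A|\DD|^{-3}$, so that the scale-invariant term genuinely doubles (rather than acquiring extra contributions) under the vanishing hypothesis.
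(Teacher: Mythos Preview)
Your proposal is correct and is exactly the intended argument: the paper gives no explicit proof for this corollary precisely because it follows by substituting the $SU_q(2)$ values of Lemma~\ref{ncintD} into the ``without $J$'' corollary that follows Theorem~\ref{coeffsASJ}, killing the $\ncint \delta(A)A|\DD|^{-3}$ term via Lemma~\ref{ncintA2}~$(v)$, and then comparing with Theorem~\ref{mainThmJ} under the hypothesis $\ncint A|\DD|^{-3}=0$. The only microscopic omission is that $\ncint|D_\Abb|^{-3}=2$ comes from Proposition~\ref{invariance1} together with Lemma~\ref{ncintD}~$(i)$, but this is immediate.
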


\section{Differential calculus on $SU_q(2)$ and applications}

\subsection{The sign of $\DD$ }

There are multiple differential calculi on $SU_{q}(2)$, see
\cite{Woronowicz, Klimyk}. Due to \cite[Theorem 3]{Schmuedgen},
the $3D$ and $4D_{\pm}$ differential calculi do not coincide with the
one considered here: the right multiplication of one-forms by an
element in
the algebra $A$ is a consequence of the chosen Dirac operator which
was introduced according to some equivariance properties with respect
to the duality between the two Hopf algebras $SU_{q}(2)$ and
$\U_{q}(su(2))$.

It is known that the Fredholm module associated to $(\A,\, \H,\,
\DD)$ is one-summable since $[F,\pi(x)]$ is trace-class for all $x \in
\A$. In fact, more can be said about $F$
\footnote{Note that a similar result for a different spectral triple
over $SU_q(2)$ when $q=0$ was obtained in \cite[eq. (48)]{Cindex}}:

\begin{prop}
Since
\begin{align}
\label{eq:F2}
\tfrac{1}{1-q^{2}}\, \left(\ul\pi(a^*) \,d\ul\pi(a)+ q^2
\,\ul\pi(b) \,d\ul\pi(b^*)+q^2 \, \ul\pi(a) \,d\ul\pi(a^*)+ q^2
\,\ul\pi(b^*) \,d\ul\pi(b) \right) = F,
\end{align}
$F$ is a central one-form modulo $OP^{-\infty}$.
\end{prop}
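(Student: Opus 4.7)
My strategy would be to reduce (\ref{eq:F2}) to a polynomial identity in the operators $a_\pm, b_\pm$ and then verify it using the bilinear relations (\ref{astuce1})--(\ref{astuce6}); the statement that $F$ is central modulo $OP^{-\infty}$ then follows immediately from (\ref{Fcommutes}). The first reduction uses the polar decomposition $\DD = F|\DD|$ together with the fact (from (\ref{Fcommutes})) that $F$ commutes with each $a_\pm, b_\pm$, and therefore with $\ul\pi(x)$ for every $x \in \A$. Consequently $d\ul\pi(x) = [\DD,\ul\pi(x)] = F\,[\,|\DD|,\ul\pi(x)\,] = F\,\delta(\ul\pi(x))$ as an exact equality, so the factor $F$ can be pulled out on the left of each summand on the left-hand side of (\ref{eq:F2}). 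The claim reduces to the operator identity
\begin{equation*}
\ul\pi(a^*)\,\delta(\ul\pi(a)) + q^2\,\ul\pi(b)\,\delta(\ul\pi(b^*)) + q^2\,\ul\pi(a)\,\delta(\ul\pi(a^*)) + q^2\,\ul\pi(b^*)\,\delta(\ul\pi(b)) \,=\, 1-q^2.
\end{equation*}

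Next, I would substitute the expansions from Lemma \ref{commutateur} --- namely $\delta(\ul\pi(a)) = a_+ - a_-$, $\delta(\ul\pi(a^*)) = a_+^* - a_-^*$, and similarly for $b$ --- and expand the four products. Organising terms by the $\Z$-grading of section~4.3, the left-hand side splits into a degree-$0$ diagonal part (the $++$ and $--$ products) and degree-$\pm 2$ cross terms (mixing $+$ and $-$ indices). The $aa^*$-type cross terms are eliminated via (\ref{astuce3})--(\ref{astuce4}), which convert them into $b$-terms; these then cancel pairwise against the already-present $b$-cross terms by the commutativity $b_\pm b_\mp^* = b_\mp^* b_\pm$ from (\ref{eq:commutrulefora,b}). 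The degree-$0$ part is handled by applying (\ref{astuce1}) and (\ref{astuce2}) to replace the $a_\pm^* a_\pm$ contributions, after which (\ref{astuce5}) together with the global constraints (\ref{astuce1'}) and (\ref{astuce1''}) should collapse the remaining $a_\pm a_\pm^*$, $b_\pm b_\pm^*$, $b_\pm^* b_\pm$ terms to the scalar $1-q^2$.

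With (\ref{eq:F2}) in hand, the remainder of the proposition is a one-line deduction. Its right-hand side is manifestly a one-form of the shape $\sum_i \ul\pi(x^i)\,d\ul\pi(y^i)$; replacing each $\ul\pi$ by $\pi$ introduces only an error in $\K_q \subset OP^{-\infty}$, so $F$ lies in $\Omega^1_{\DD}(\A)$ modulo $OP^{-\infty}$. Centrality is then immediate from (\ref{Fcommutes}): $F$ commutes exactly with $\ul\pi(\A)$, so $[F,\pi(x)] = [F,\pi(x) - \ul\pi(x)] \in OP^{-\infty}$ for every $x \in \A$. The principal difficulty is the bookkeeping in the second step, where the absence of a straightforward classical $q \to 1$ limit (because of the factor $1/(1-q^2)$) signals that the cancellations are genuinely a quantum phenomenon and require a careful simultaneous use of the quadratic relations (\ref{astuce1})--(\ref{astuce6}), with precise attention to the $q^2$ coefficients attached to each summand.
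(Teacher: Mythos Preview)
Your proposal is correct and follows essentially the same route as the paper: reduce \eqref{eq:F2} via $d\ul\pi(x)=F\,\delta\ul\pi(x)$ (using \eqref{Fcommutes}) to the scalar identity $a^*\delta a+q^2 b\,\delta b^*+q^2 a\,\delta a^*+q^2 b^*\delta b=1-q^2$, expand with Lemma~\ref{commutateur}, and simplify via the quadratic relations; the paper presents these steps in the opposite order (computing the $\delta$-identity first, then passing to $d$) and organizes the cancellation by isolating the ``all-$+$'' block $a_+^*a_+-q^2a_+a_+^*+q^2(b_+^*b_+-b_+b_+^*)=1-q^2$ from \eqref{astuce1} rather than by $\Z$-degree, but the content is the same. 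One small correction: for the degree-$0$ part you only need \eqref{astuce1} and \eqref{astuce2} (the ``$-$'' block vanishes by \eqref{astuce2}); relations \eqref{astuce5}, \eqref{astuce1'}, \eqref{astuce1''} are not required here.
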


\begin{proof}
Forgetting $\piappr$, this follows from
\begin{align}
a^* \,\delta a+ q^2 \,b \,\delta b^* +q^2 \, &a \,\delta a^* + q^2
\,b^* \,\delta b \nonumber\\
&=(a_{+}^*+a_{-}^*)(a_{+}-a_{-})+q^2\,(b_{+}+b_{-})(b_{-}^*-b_{+}^*)+q^2\,
(a_{+}+a_{-})(a_{-}^*-a_{+}^*)\nonumber\\
&\hspace{3.9cm} +q^2\,(b_{+}^*+b_{-}^*)(b_{+}-b_{-})\nonumber\\
&=[a_{+}^{*}a_{+}-q^{2}\,a_{+}a_{+}^{*}+q^2\,b_{+}^{*}b_{+}
-q^2\,b_{+}b_{+}^{*}]+R=(1-q^2) +R
\label{CalculF}
\end{align}
by \eqref{astuce1} where we check that the remainder $R$ is zero:
\begin{align*}
R=&-[a_{+}^*a_{-}+q^2\,b_{+}^*b_{-}]+[a_{-}^*a_{+}+q^2\,b_{-}^*b_{+}]
-[a_{-}^{*}a_{-}-q^{2}\,a_{-}a_{-}^{*}+q^2\,b_{-}^{*}b_{-}
-q^2\, b_{-}b_{-}^{*}] \\
&\hspace{1.5cm}+(q^2\,a_{+}a_{-}^*+q^2\,q_{-}^*b_{+})
-(a_{+}^*a_{-}+q^2\, b_{+}^*b_{-}),
\end{align*}
thus, applying \eqref{astuce2}, \eqref{astuce3},
\eqref{astuce4}, $R=+(q^2\,a_{+}a_{-}^*+q^2\,q_{-}^*b_{+})
-(a_{+}^*a_{-}+q^2\, b_{+}^*b_{-})=0$ using commutation
relations \eqref{eq:commutrulefora,b}.

Now, replacing $\delta$ by $d$ in \eqref{CalculF} gives \eqref{eq:F2}
since $F$ commute with $a_{\pm},\, b_{\pm}$ and $F$ is central
by \eqref{Fcommutes}.
\end{proof}

\begin{prop}
The one-form in \eqref{eq:F2} is in fact exactly a function of the
Dirac operator $\DD$:
\begin{align}
    \label{eq:F3}
\pi(a^*) \,d\pi(a)+ q^2 \,\pi(b) \,d\pi(b^*)+q^2
\, \pi(a) \,d\pi(a^*) + q^2 \,\pi(b^*) \,d\pi(b)
= \xi_q(\DD)=F\, \xi_q (\vert \DD \vert),
\end{align}
where $ \xi_q(s) := q \tfrac{[2s] - 2s}{[s+1/2] [s-1/2]}$.

Moreover,  $F=\lim_{q \to 0} \xi_q(\DD)$.
\end{prop}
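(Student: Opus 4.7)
The plan is to compute both sides of \eqref{eq:F3} on the basis $\kett{j\mu n}$, which diagonalises $\DD$ with eigenvalues $\pm d_{j^\pm}$. Expanding $d\pi(x)=[\DD,\pi(x)]$ splits the left-hand side as $L=C_1-C_2$, where
\begin{align*}
C_1 &:= \pi(a^*)\DD\pi(a)+q^2\pi(b)\DD\pi(b^*)+q^2\pi(a)\DD\pi(a^*)+q^2\pi(b^*)\DD\pi(b),\\
C_2 &:= \bigl[\pi(a^*a)+q^2\pi(aa^*)+q^2\pi(bb^*)+q^2\pi(b^*b)\bigr]\DD.
\end{align*}
The relations \eqref{defrule} together with $bb^*=b^*b$ give $a^*a+q^2aa^*+q^2bb^*+q^2b^*b=(a^*a+q^2b^*b)+q^2(aa^*+bb^*)=1+q^2$, hence $C_2=(1+q^2)\DD$ at once.

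Next I would show that $C_1$ is diagonal on $\kett{j\mu n}$. The particular combination of coefficients $(1,q^2,q^2,q^2)$ is the one that turns the tensor $a^*\otimes a + q^2\, b\otimes b^* + q^2\, a\otimes a^* + q^2\, b^*\otimes b$ into the $\U_q(su(2))$-invariant pairing between the fundamental corepresentation and its dual. Since $\DD$ is equivariant for the right $\U_q(su(2))$-action on $\H$ and $\pi$ intertwines it appropriately, $C_1$ commutes with that action. The basis $\kett{j\mu n}$ realises the isotypical decomposition $\H=\bigoplus_j V_j\otimes M_j$, so Schur's lemma forces $C_1$ to preserve each $j$-block and to act there only through a scalar on the $V_j$ factor; in particular, $C_1$ is diagonal in $\kett{j\mu n}$ with eigenvalues depending only on $j$ and the up/down grading.

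Computing the diagonal entries on $\kett{j\mu n}$ then amounts to summing, for each of the four summands of $C_1$, the two back-to-$j$ contributions $j\to j^\pm\to j$, weighted by the matrix-valued coefficients $\alpha^\pm,\tilde\alpha^\mp,\beta^\pm,\tilde\beta^\mp$ of \eqref{eq:reprexact} and the intermediate $\DD$-eigenvalues $\mathrm{diag}(d_{(j^\pm)^+},-d_{j^\pm})$. Using the $q$-bracket identities $[x+y]=q^y[x]+q^{-x}[y]$ and $[2s]=(q^s+q^{-s})[s]$, together with the cancellations already exploited in $C_2$, the eight contributions collapse, on the up-component, to
\[
(1+q^2)\,d_{j^+}+q\,\frac{[2d_{j^+}]-2d_{j^+}}{[d_{j^+}+\tfrac12][d_{j^+}-\tfrac12]}\;=\;(1+q^2)d_{j^+}+\xi_q(d_{j^+}),
\]
and to the analogous signed expression with $-d_j$ on the down-component. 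Subtracting $C_2=(1+q^2)\DD$ leaves exactly $\xi_q(\DD)$. The numerator $[2s]-2s$ is odd in $s$ and $[s+\tfrac12][s-\tfrac12]$ is even, so $\xi_q$ is odd and $\xi_q(\DD)=F\,\xi_q(|\DD|)$.

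Finally, the classical limit follows by spectral analysis: at fixed $s>0$ one has $[x]=(q^x-q^{-x})/(q-q^{-1})\sim q^{1-x}$ as $q\to 0^+$, so $q([2s]-2s)\sim q\cdot q^{1-2s}$ while $[s+\tfrac12][s-\tfrac12]\sim q^{2-2s}$, giving $\xi_q(s)\to 1$; by oddness $\xi_q(s)\to\sgn(s)$, and since $0\notin\spec(\DD)$, functional calculus yields $\lim_{q\to 0}\xi_q(\DD)=F$. The main technical obstacle is the $q$-combinatorial collapse in the third paragraph: equivariance handles the $(\mu,n)$-dependence, but identifying the surviving $j$-dependent eight-term sum of products of square roots of $q$-integers as the clean closed form $(1+q^2)d_{j^+}+\xi_q(d_{j^+})$ requires a careful manipulation of the explicit matrix coefficients of \eqref{eq:reprexact}.
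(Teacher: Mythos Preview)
Your strategy is exactly the paper's: argue by equivariance that $\omega:=\pi(a^*)\,d\pi(a)+q^2\pi(b)\,d\pi(b^*)+q^2\pi(a)\,d\pi(a^*)+q^2\pi(b^*)\,d\pi(b)$ is diagonal in the spinorial basis, then read off the eigenvalues from the explicit coefficients of \eqref{eq:reprexact}. The paper, however, invokes invariance under the full biaction of $\Uq\times\Uq$ (verifying for instance $e\triangleright\omega=\epsilon(e)\,\omega$), and this is what is actually needed. Commutation with the right action alone, as you state it, does force $C_1$ to preserve each $j$-block and each up/down component (different $j$ give inequivalent right-irreps), but on the multiplicity space of a right-isotypic piece---which is precisely where the $\mu$-index lives---Schur's lemma gives no constraint. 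So you have not ruled out $\mu$-dependence of the diagonal entries. Invariance under the left action as well (equivalently, full bi-invariance of the algebraic element $\omega$) is what kills the $\mu$-label and leaves a function of $j$ and the grading only.

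Once you upgrade to bi-invariance the argument goes through. Your splitting $L=C_1-C_2$ with $C_2=(1+q^2)\DD$ and the oddness observation $\xi_q(-s)=-\xi_q(s)$ yielding $\xi_q(\DD)=F\,\xi_q(|\DD|)$ are clean additions that the paper does not spell out; it simply records the diagonal entries as the outcome of a ``tedious computation'' and checks the $q\to0$ limit directly on those expressions.
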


\begin{proof}
First, let us observe that the one-form $\omega$ in (\ref{eq:F3}) is
invariant under the action of the $\Uq \times \Uq$: $h \triangleright
\omega=\epsilon(h)\, \omega$ for any $h\in \Uq \times \Uq$. For
instance, using notations of \cite{DLSSV}
\begin{align*}
e \triangleright \omega= q^{\oh} a^* db + q^2 \left( - q^{\oh-1} b
da^*
+ q^{-\oh} b da^* - q^{-1-\oh} a^* db \right) = 0=\epsilon(e) \,
\omega.
\end{align*}
Therefore, since both the representation $\pi$ as well as the
operator $\DD$ are equivariant, the image of $\omega$ must
be diagonal in the spinorial base.  
A tedious computation with the full spinorial representation $\pi$
given in \eqref{eq:reprexact} yields
\begin{align*}
\langle v^{j \up}_{ml} ,\,\omega\, v^{j \up}_{ml} \rangle &=
\tfrac{q^{8j+8}-q^{8j+6}-(4j+3)\,q^{4j+6}+(8j+6)\,q^{4j+4}-(4j+3)\,q^{4j+2}-q^2+1}{(q^{4j+4}-1)(q^{4j+2}-1)}=\xi_q(2j+\tfrac
32),\\ 
\,\langle v^{j \dn}_{ml} ,\,\omega\, v^{j \dn}_{ml} \rangle
&=\tfrac{-q^{8j+4}+q^{8j+2}+(4j+1)\,q^{4j+4}-(8j+2)\,q^{4j+2}+(4j+1)\,q^{4j}+q^2-1}{(q^{4j+2}-1)(q^{4j}-1)}=-\xi_q(2j+\tfrac
12).
\end{align*}
These expressions have a clear $q=0$ limit equal respectively to 1
and -1, so $\omega \to F$ as $q \to 0$.
\end{proof}

In the $q=1$ limit, these expressions yields identically
$0$, which is confirmed by the fact that all one-forms are central,
it could be expressed as $d( a a^* + b b^*) = d\,1$.

Note that since the invariant one-form we constructed differs
by $OP^{-\infty}$ from $F$, hence any commutator with it will
be itself in $OP^{-\infty}$. 

We do not know if a central form $\omega$ is automatically invariant
by the action of both $U_q(su(2))$, that is: 
$h \triangleright \omega = \epsilon(h) \omega$.

\begin{prop}
The order one calculus up to $OP^{-\infty}$ is not universal.
\end{prop}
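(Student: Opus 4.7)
The plan is to exhibit a non-zero element of the universal first-order module $\Omega^1_u(\A)$ whose image under the canonical map $\rho\colon x\otimes y \mapsto \pi(x)[\DD,\pi(y)]$ lies in $OP^{-\infty}$. Any such element is an obstruction to injectivity of the induced map $\Omega^1_u(\A)\to \Omega^1_\DD(\A)/OP^{-\infty}$ and thus precisely to universality of the order-one calculus modulo $OP^{-\infty}$.

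I would start from the explicit universal one-form
\[
\omega_0 \;:=\; a^*\,da + q^2\, b\, db^* + q^2\, a\, da^* + q^2\, b^*\, db \;\in\; \Omega^1_u(\A).
\]
By \eqref{eq:F3} one has $\rho(\omega_0) = F\,\xi_q(|\DD|)$, and a short asymptotic estimate on $\xi_q$ (namely $\xi_q(s) - (1-q^2) = O(q^{2s})$ as $s\to\infty$, using $[n]\sim q^{-n}/(q^{-1}-q)$ for $0<q<1$) gives $\xi_q(|\DD|) - (1-q^2)\,\mathrm{Id} \in OP^{-\infty}$. Combining \eqref{Fcommutes} with the fact that $\pi(x) - \ul\pi(x) \in \K_q \subset OP^{-\infty}$ then yields $[F,\pi(x)] \in OP^{-\infty}$ for every $x\in\A$, hence
\[
\rho\bigl(\omega_0 x - x\omega_0\bigr) = [\rho(\omega_0),\pi(x)] \in OP^{-\infty} \qquad \forall\,x\in\A.
\]
Consequently $\omega_x := \omega_0 x - x\omega_0$ is a candidate for a non-trivial kernel element for every $x\in\A$.

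It then remains to find some $x$ for which $\omega_x \neq 0$ in $\Omega^1_u(\A)\subset \A\otimes\A$. I would take $x=b$ and work directly inside $\A\otimes\A$. Using the defining relations $a^*a+q^2 b^*b = 1 = aa^*+bb^*$ together with $b^*b = bb^*$, one rewrites
\[
\omega_0 = a^*\otimes a + q^2\, b\otimes b^* + q^2\, a\otimes a^* + q^2\, b^*\otimes b - (1+q^2)\,1\otimes 1.
\]
Expanding $\omega_b = \omega_0 b - b\omega_0$ in the Poincar\'e--Birkhoff--Witt basis $\{m^\alpha\otimes m^\beta\}$ of $\A\otimes\A$, the basis element $a^*\otimes ab = m^{(-1,0,0)}\otimes m^{(1,1,0)}$ appears exactly once, via the summand $(a^*\otimes a)\cdot b$ coming from $\omega_0 b$; the nearest contribution from $b\omega_0$ is $ba^*\otimes a = q^{-1}\,m^{(-1,1,0)}\otimes m^{(1,0,0)}$, a different PBW monomial. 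Therefore the coefficient of $a^*\otimes ab$ in $\omega_b$ equals $1$, so $\omega_b\neq 0$.

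The hard part will be this last step: the $q$-commutation relations generate many cross terms when one expands $\omega_0 b$ and $b\omega_0$, so care is needed to isolate a single PBW monomial with no cancellation among the $a,a^*,b,b^*$ rewrites. Once this is established, $\omega_b$ is a non-zero element of $\Omega^1_u(\A)$ lying in the kernel of the canonical map $\Omega^1_u(\A)\to\Omega^1_\DD(\A)/OP^{-\infty}$, which is exactly the failure of universality of the order-one calculus modulo $OP^{-\infty}$.
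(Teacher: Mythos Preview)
Your proposal is correct and follows the paper's strategy: take the universal one-form $\omega_0$ whose image is, up to $OP^{-\infty}$, the central element $(1-q^2)F$, so that $x\omega_0-\omega_0 x$ lies in the kernel of $\Omega^1_u(\A)\to\Omega^1_\DD(\A)/OP^{-\infty}$ for every $x\in\A$. The paper is terser --- it invokes \eqref{eq:F2} (where $\ul\pi(\omega_0)/(1-q^2)=F$ exactly, by \eqref{Fcommutes}) rather than your route through $\xi_q(\DD)$ via \eqref{eq:F3}, and it does not spell out the PBW check that $\omega_b\neq 0$ in $\Omega^1_u(\A)$, which you correctly supply.
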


\begin{proof}
Let us take the one-form $\omega_F$ from (\ref{eq:F2}),
which gives $F$. Then, for any $x \in \A(SU_q(2))$ we
have
$ \ul\pi(x \omega_F - \omega_F x) = 0$.
\end{proof}

Note that since $\omega_F \sim  (1-q^2)^{-1} \omega \mod OP^{-\infty}$, we get $1\sim (1-q^2)^{-1}\xi_q(|\DD|) \mod OP^{-\infty}$.

\begin{corollary}
Still modulo $OP^{-\infty}$, $1 \in \pi\big(\Omega^2_{u}(\A)\big)$.
\end{corollary}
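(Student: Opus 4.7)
The plan is to exhibit an explicit universal 2-form whose image under $\pi$ agrees with the identity modulo $OP^{-\infty}$, by squaring a universal 1-form that already represents $F$ up to smoothing.

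First I would repackage the two preceding propositions as the statement that the universal 1-form
\[
\tilde{\omega}_F := \tfrac{1}{1-q^2}\bigl(a^* da + q^2\, b\,db^* + q^2\, a\,da^* + q^2\, b^*\, db\bigr) \in \Omega^1_u(\A)
\]
satisfies $\pi(\tilde{\omega}_F) \equiv F$ mod $OP^{-\infty}$. Indeed, \eqref{eq:F3} gives $\pi(\tilde{\omega}_F) = (1-q^2)^{-1} F\,\xi_q(|\DD|)$, and from the explicit expression $\xi_q(s) = q\,([2s]-2s)/([s+1/2][s-1/2])$ one sees that $\xi_q(s) - (1-q^2)$ decays as $O(q^{2s})$ for large $s$; evaluated at the eigenvalues $2j\pm 1/2$ of $|\DD|$, this produces an operator in $OP^{-\infty}$ (in fact in the ideal $\K_q$ of \eqref{J_q}), so that $\xi_q(|\DD|)/(1-q^2) \equiv \mathrm{Id}$ mod $OP^{-\infty}$.

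Next I would form the square
\[
\eta := \tilde{\omega}_F \cdot \tilde{\omega}_F \in \Omega^2_u(\A).
\]
Writing $\tilde{\omega}_F = \sum_i x_i\,dy_i$, the universal Leibniz rule $dy_i \cdot x_j = d(y_i x_j) - y_i\,dx_j$ gives
\[
\eta = \sum_{i,j}\bigl( x_i\, d(y_i x_j)\, dy_j - x_i y_i\, dx_j\, dy_j\bigr),
\]
which is manifestly a sum of universal $2$-forms $u\,dv\,dw$.

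Finally I would invoke the standard fact that $\pi$ extends to a (non-graded-commutative) algebra homomorphism $\Omega^\bullet_u(\A) \to \B(\H)$ sending $a_0 da_1 \cdots da_n$ to $\pi(a_0)[\DD,\pi(a_1)]\cdots[\DD,\pi(a_n)]$; multiplicativity on wedge products is a direct consequence of the Leibniz rule for $[\DD,\cdot\,]$. Combined with $F^2 = \mathrm{Id}$ (immediate from \eqref{eq:defF}, as $\DD$ is invertible), this yields
\[
\pi(\eta) = \pi(\tilde{\omega}_F)\,\pi(\tilde{\omega}_F) \equiv F\cdot F = 1 \quad \mathrm{mod}\ OP^{-\infty},
\]
which is exactly the claim. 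There is no substantive obstacle; the only quantitative ingredient is the elementary exponential decay estimate for $\xi_q(|\DD|) - (1-q^2)$, which was implicitly used already in the sentence immediately preceding the corollary.
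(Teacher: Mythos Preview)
Your argument is correct and is essentially the same as the paper's: the paper's one-line proof simply observes that $1 = F^2$ is in $\pi(\Omega^2_u(\A))$ because $F$ is (modulo $OP^{-\infty}$) a one-form, and you have spelled out exactly this by explicitly squaring the universal one-form $\tilde\omega_F$. Your detour through the decay of $\xi_q(|\DD|)-(1-q^2)$ is fine but slightly roundabout; the cleaner way to see $\pi(\tilde\omega_F)\equiv F$ is directly from \eqref{eq:F2}, since $\ul\pi$ and $\pi$ differ by elements of $\K_q\subset OP^{-\infty}$.
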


\begin{proof}
$1=F^2$ is by definition in $\pi\big(\Omega^2_{u}(\A)\big)$.
\end{proof}
In fact, one checks, using \eqref{astuce1}, \eqref{astuce2},
\eqref{astuce5} that
\begin{align}
    \label{eq:1is2form}
q^2\, da\, da^*-da^*\, da=1-q^2
\end{align}
showing again that $1 \in \pi\big(\Omega^2_{u}(\A)\big)$.

Similarly, using
\eqref{eq:commutrulefora,b} and \eqref{astuce1'},
\eqref{astuce5}, \eqref{astuce6}, we get still up to $OP^{-\infty}$
\begin{align}
    \label{2forms}
&q \,da \, db = db \, da, && q \,da \, db^* = db^* \, da, \nonumber\\
&da^* \, db = q \, db \,da^*, && da^* \, db^* = q \, db^* \, da^*
\nonumber\\
&db \, db^* = db^* \,db, &&  da \, da^* +db \, db^* = -1.
\end{align}
The use of the last equality of \eqref{2forms} and
\eqref{eq:1is2form} gives

\begin{prop}
Up to $OP^{-\infty}$, $F$ is not a (universal) closed one-form, as
\begin{align}
\label{dF}
da^* \,da+q^2 \, da \,da^*+ q^2 \,db^* \,db+ q^2 \,db \,db^*  =
-1-q^2.
\end{align}
\end{prop}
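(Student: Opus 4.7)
The plan is to derive \eqref{dF} as a direct algebraic consequence of the two-form identities \eqref{eq:1is2form} and \eqref{2forms} already established. Since each of those identities holds modulo $OP^{-\infty}$, so will the final equality, and there is no analytic work left to do beyond bookkeeping.

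First, I would use \eqref{eq:1is2form}, namely $q^2\, da\, da^* - da^*\, da = 1-q^2$, to eliminate the first summand in favor of $da\,da^*$:
\begin{equation*}
da^*\, da \,=\, q^2\, da\, da^* - (1-q^2).
\end{equation*}
Next, the fifth relation in \eqref{2forms}, $db\, db^* = db^*\, db$, lets me combine the last two summands:
\begin{equation*}
q^2\, db^*\, db + q^2\, db\, db^* \,=\, 2q^2\, db\, db^*.
\end{equation*}
Substituting these into the left-hand side of \eqref{dF} yields
\begin{equation*}
2q^2\, da\, da^* + 2q^2\, db\, db^* + (q^2 - 1) \,=\, 2q^2\bigl(da\, da^* + db\, db^*\bigr) + q^2 - 1.
\end{equation*}

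To finish, I would apply the last identity of \eqref{2forms}, $da\, da^* + db\, db^* = -1$, which collapses the parenthesis to $-1$ and gives $-2q^2 + q^2 - 1 = -(1+q^2)$, as claimed. Since the scalar $-(1+q^2)$ is nonzero, combining with \eqref{eq:F2} (which exhibits $F$, modulo $OP^{-\infty}$, as $(1-q^2)^{-1}$ times the one-form $a^*da + q^2 bdb^* + q^2 ada^* + q^2 b^*db$) one reads off that the ``differential'' of this representative of $F$ is the nonzero scalar $-(1+q^2)/(1-q^2)$, so $F$ is not a closed universal one-form modulo $OP^{-\infty}$.

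There is essentially no obstacle here: the work is already contained in \eqref{eq:1is2form} and \eqref{2forms}, both of which were themselves established via the defining identities \eqref{astuce1}, \eqref{astuce2}, \eqref{astuce5}, \eqref{astuce6} together with the commutation rules \eqref{eq:commutrulefora,b}. The only point worth double-checking is the sign and $q$-power bookkeeping in the step where $da^*\, da$ is replaced by $q^2\, da\, da^* - (1-q^2)$, since an error there would shift the final constant. A short sanity check is that the formula should reduce, as $q\to 1$, to $da^*da + da\,da^* + db^*db + db\,db^* = -2$, which is consistent with the classical fact that on $\mathbb{S}^3$ one has $a a^* + b b^* = 1$, hence $d(aa^* + bb^*) = 0$ and a second differentiation of this relation produces precisely the required constant.
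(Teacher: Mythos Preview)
Your proof is correct and follows exactly the approach indicated in the paper, which simply states that the identity follows from \eqref{eq:1is2form} and the last equality of \eqref{2forms}. You have merely filled in the elementary algebra that the paper leaves implicit, and your added remark explaining why this shows $F$ is not closed is a welcome clarification.
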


\subsection{The ideal $\CR$}

In order to perform explicit calculations of all terms of
the spectral action, we observe that each $\delta$-one-form
could be expressed in terms of $x \delta(z) y$, where $z$ is one
of the generators $a,a^*,b,b^*$ and $x,y$ are some elements of
the algebra $\A(SU_q(2))$.

Then, for the computation of $\ncint x dz y |\DD|^{-1}$
we can use the trace property of the noncommutative integral to get:
$$\ncint x \delta(z) y \,|\DD|^{-1} = \ncint yx \delta(z) \,|\DD|^{-1} +
\ncint x \delta(z) \,|\DD|^{-1} \delta(y) \, |\DD|^{-1}.$$

Therefore, the problem of calculating the tadpole-like integral could
be in effect reduced to the calculation of much simpler integrals:
$ \ncint x \delta(z) |\DD|^{-1}$  for all generators $z$ and the integrals
of higher order in $|\DD|^{-1}$.

However, it appears that the calculations of higher-order terms
simplify
a lot, when we further restrict the algebra by introducing an ideal,
which is {\em invisible} to the parts of integral at dimension $2$ and
$3$. For instance, consider the space of
pseudodifferential operators
$T \in \Psi^{0}(\A)$ of order less or equal to zero (see \cite{CM}),
which satisfy
\begin{align}
    \label{ideal}
\ncint T\, t \,|\DD|^{-2} =\ncint t\,T \,|\DD|^{-2}
= \ncint T\, t \,|\DD|^{-3} = \ncint t\, T \,|\DD|^{-3} = 0,
 \;\forall t \in \Psi^0_{0}(\A).
\end{align}
The elements $a_{-}, \,\, b_- b_+,\,
\, b_{-}b_{+}^*$ and their adjoints
are in this space up to $OP^{-\infty}$: this is due to the fact that
in Theorem \ref{Theo},
$\tau_{1}\otimes \tau_1\big(r(x)\big)=0$ when $r(x)\in
\pi_{\pm}(\A)\otimes \pi_{\pm}(\A)$ mod $OP^{-\infty}$ 
contains tensor products of $\pi_\pm (b)$ or $\pi_\pm(b^*)$ since
these elements are in the kernel of $\sigma$.
\begin{definition}
    \label{Rem:onR}
Let $R$ be the kernel in $X$ of $(\sigma \otimes \sigma)\circ r$
where $r$ is the Hopf-map defined in \eqref{eq:r} and $\sigma$ is the
symbol map and let $\CR$  be the vector space generated by $R$ and
$R\,F$.
\end{definition}

Note that $R$ is a $^*$-ideal in $X$ and $$a_{-}, \,\,b_-
b_+(=q^2\,b_+ b_-),\,
\,b_{-}b_{+}^* \text{ are in }\CR.$$

By construction and Theorem \ref{Theo}, any $T \in \CR$ satisfies
\eqref{ideal} and $\CR$ is invariant by $F$.

Moreover, by \eqref{astuce2}, $[b_{-},b_{-}^*] \in R$, so by
\eqref{astuce1} and \eqref{astuce5},
$a_{+}^*a_{+}-q^2\,a_{+}a_{+}^*-(1-q^2) \in R$ and by
\eqref{astuce6}, $q\,a_{+}b_{-}-b_{-}a_{+} \in R$.

It is interesting to quote, thanks to Theorem \ref{Theo}
that if $x\in R$, then $\ncint F\,x\,\vert\DD\vert^{-1}=0$ while a
priori, $\ncint x\, \vert \DD \vert^{-1}\neq 0$.

Note that $F \in \Psi^0(\A)$ also satisfies \eqref{ideal} by
Theorem \ref{Theo} while $F \notin \CR$ since $F^2=1$.

Moreover other elements are in $\CR$ like for instance
$d(b^*b)= d(bb^*)$:
\begin{align*}
\delta(bb^*)&=-\delta (aa^*)=-\delta a\,a^*-a\,\delta
a^*=-(a_{+}-a_{-})(a_{+}^*+a_{-}^*)
-(a_{+}+a_{-})(a_{-}^*-a_{+}^*)\\
&=2(a_{+}a_{-}^*-a_{-}a_{+}^*)
\end{align*}
is in $R$ since $a_{-}\in R$  yielding $d(bb^*)\in R\,F$.

We do not know if $\CR$ is equal to the subset of the algebra
generated by $\B$ and $\B\, F$ satisfying \eqref{ideal}.

\begin{lemma}
$\CR$ is a $*$-ideal in $\Psi^0(\A)$ which is invariant by
$F$, $d$, $\delta$.
\end{lemma}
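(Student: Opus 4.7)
The plan is to verify each of the four properties ($*$-closure, ideal in $\Psi^{0}(\A)$, $F$-invariance, $d$- and $\delta$-invariance) by leveraging two structural facts: (a) $R$ is by construction the kernel of the $*$-homomorphism $(\sigma\otimes\sigma)\circ r$, and (b) $F$ commutes with every generator $a_\pm,b_\pm$ of $X$ (and a fortiori with $R$), commutes with $|\DD|$ and $\DD$ by polar decomposition, and squares to $1$.

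First I would dispatch the easy properties. Since $F$ commutes with $R$ and $F^{2}=1$, one has $F\CR=FR+FRF=RF+R=\CR$, giving $F$-invariance. For $*$-closure, $R^{*}=R$ (kernel of a $*$-homomorphism) and $(RF)^{*}=F^{*}R^{*}=FR=RF$, so $\CR^{*}=\CR$. For the ideal property, the key inputs are: $R$ is a two-sided ideal of $X$; $\B=X$ modulo $OP^{-\infty}$ (Remark \ref{pseudodiff}); and $\Psi^{0}(\A)$ is generated modulo $OP^{-\infty}$ by $\B\cup\B F$. Using $FR=RF$ and $F^{2}=1$, one checks the four products $X\cdot R$, $X\cdot RF$, $XF\cdot R$, $XF\cdot RF$ all land in $R\cup RF\subset\CR$ (and the analogous right-multiplications), and the $OP^{-\infty}$ discrepancies can be absorbed since $\CR$ by definition contains the $OP^{-\infty}$ error.

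For $d$- and $\delta$-invariance the real content lies in showing $\delta(R)\subseteq R$. Because $F$ commutes with $|\DD|$ and with every $T\in X$, one has $\delta(F)=0$ and $d(T)=[\DD,T]=[F|\DD|,T]=F\,\delta(T)$ for any $T\in X$; hence $d(F)=0$ as well, and it suffices to establish $\delta(R)\subseteq R$: then $\delta(RF)=\delta(R)F\subseteq RF\subseteq\CR$, while $d(R)=F\delta(R)\subseteq FR=RF\subseteq\CR$ and $d(RF)=F\delta(R)F=\delta(R)\subseteq\CR$.

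The crucial step---and where the main work lies---is the claim $\delta(R)\subseteq R$, which I would prove using the $\Z$-grading on $X$ defined in section 4.3 (degree $+1$ on $a_{+},b_{+},a_{-}^{*},b_{-}^{*}$, $-1$ on $a_{-},b_{-},a_{+}^{*},b_{+}^{*}$), with respect to which $\delta$ acts as multiplication by the degree on homogeneous elements (Lemma \ref{commutateur}(iii)). The point is that $R$ is a \emph{graded} subspace of $X$: the composite $(\sigma\otimes\sigma)\circ r$ kills every generator whose image in $\pi_{\pm}(\A)$ involves $b$ or $b^{*}$, so on generators only $a_{+}\mapsto z\otimes z$ (degree $+1$) and $a_{+}^{*}\mapsto\bar z\otimes\bar z$ (degree $-1$) survive. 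Giving $C^{\infty}(S^{1})\otimes C^{\infty}(S^{1})$ the $\Z$-grading in which $z\otimes z$ has degree $+1$, the map $(\sigma\otimes\sigma)\circ r$ is therefore a graded $*$-homomorphism, so its kernel $R$ is graded. For $T=\sum_{j}T_{j}\in R$ with $T_{j}$ of degree $j$, each $T_{j}\in R$, and $\delta(T)=\sum j T_{j}\in R$.

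The main obstacle is precisely this last verification that $(\sigma\otimes\sigma)\circ r$ is graded: it requires one to track, through the Hopf-type formulas \eqref{eq:r}, that a monomial in $a_{\pm},b_{\pm}$ of total grading $j$ which survives the symbol map is forced to reduce to a scalar multiple of $a_{+}^{j}$ (or $(a_{+}^{*})^{|j|}$), contributing only to the degree-$j$ component $z^{j}\otimes z^{j}$. Once this compatibility is recorded, everything else is formal manipulation.
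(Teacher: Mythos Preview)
Your proof is correct and follows essentially the same route as the paper, though you supply the details the paper leaves implicit. The paper's proof of $\delta$-invariance is a single phrase (``true on the generators of $R$''), and your graded-kernel argument is exactly the clean way to make that precise: the map $(\sigma\otimes\sigma)\circ r$ kills every generator except $a_{+}\mapsto z\otimes z$ and $a_{+}^{*}\mapsto\bar z\otimes\bar z$, so it respects the $j$-shift grading, $R$ is graded, and $\delta$ (which multiplies homogeneous elements by their degree) preserves $R$.

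One small inaccuracy: you write that ``$\CR$ by definition contains the $OP^{-\infty}$ error.'' It does not---$\CR$ is the span of $R$ and $RF$ inside $X+XF$, and $X$ does not contain $OP^{-\infty}$. The paper is equally loose on this point (it only says ``$\CR$ appears to be an ideal in $\Psi^0(\A)$''), and in practice it is harmless because $\CR$ is only ever used to detect vanishing of $\ncint T|\DD|^{-2}$ and $\ncint T|\DD|^{-3}$, where $OP^{-\infty}$ contributions vanish anyway. If you want a literally true statement, either work modulo $OP^{-\infty}$ throughout or note that the ideal property holds in the quotient $\Psi^{0}(\A)/OP^{-\infty}$.
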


\begin{proof}
Since $R$ is an ideal in $X=\B$ mod $OP^{-\infty}$ (see Remark
\ref{pseudodiff}), $\CR$ appears to be an ideal in $\Psi^0(\A)
\subset \text{algebra generated by } \B \text{ and } \B\,F$.
Since $\CR$ is invariant by $F$, its invariance by $d$ follows from
its invariance by $\delta$ which is true on the generators of $R$.
\end{proof}

Note that, according to Theorem \ref{ncintLin},
$\ncint da \,|\DD|^{-2}= \ncint da \,|\DD|^{-3}=0$ while
$\ncint a^*da \,|\DD|^{-3}=2$ which emphasizes the importance the quantifiers "for all" in \eqref{ideal}.

\begin{lemma}
For any $t \in \Psi_{0}^0(\A)$ and $T \in \CR$, we have 
$\ncint t \,T \,|\DD|^{-1} = \ncint T \,t \,|\DD|^{-1}$.
\end{lemma}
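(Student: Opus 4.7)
The plan is to reduce the difference $\ncint tT|\DD|^{-1} - \ncint Tt|\DD|^{-1}$ to a commutator of $T$ with $|\DD|^{-1}$ using the trace property of $\ncint$ established in Proposition~\ref{tracenc}, and then to expand that commutator using the $\delta$-derivation until the remainder has small enough order to be absorbed by the dimension-counting argument.

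Concretely, I would first use the trace property of $\ncint$ on $\Psi(\A)$ (the spectral triple of $SU_q(2)$ is simple of dimension $3$) to write
\begin{align*}
\ncint tT|\DD|^{-1} - \ncint Tt|\DD|^{-1} = \ncint T|\DD|^{-1}t - \ncint Tt|\DD|^{-1} = \ncint T\,[|\DD|^{-1},t],
\end{align*}
and then invoke the elementary identity $[|\DD|^{-1},t] = -|\DD|^{-1}\delta(t)|\DD|^{-1}$. The goal is thus to show $\ncint T|\DD|^{-1}\delta(t)|\DD|^{-1}=0$.

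The heart of the argument consists in commuting the left-most $|\DD|^{-1}$ past $\delta(t)$. Iterating the relation $|\DD|^{-1}\delta^{k}(t) = \delta^{k}(t)|\DD|^{-1} - |\DD|^{-1}\delta^{k+1}(t)|\DD|^{-1}$ twice yields the decomposition
\begin{align*}
|\DD|^{-1}\delta(t)|\DD|^{-1} = \delta(t)|\DD|^{-2} - \delta^{2}(t)|\DD|^{-3} + |\DD|^{-1}\delta^{3}(t)|\DD|^{-3},
\end{align*}
so that
\begin{align*}
\ncint T|\DD|^{-1}\delta(t)|\DD|^{-1} = \ncint T\delta(t)|\DD|^{-2} - \ncint T\delta^{2}(t)|\DD|^{-3} + \ncint T|\DD|^{-1}\delta^{3}(t)|\DD|^{-3}.
\end{align*}
Since $\Psi_{0}^{0}(\A)$ is stable under $\delta$ by its very definition, the operators $\delta(t)$ and $\delta^{2}(t)$ lie in $\Psi_{0}^{0}(\A)$. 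Combined with the fact that $\CR$ is a two-sided ideal in $\Psi^{0}(\A)$, so $T\delta^{k}(t)\in\CR$, the first two terms on the right-hand side vanish by the defining property \eqref{ideal} of $\CR$.

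The third term is the one to bound by order. Using $T\in OP^{0}$ (since $\CR\subset\Psi^{0}(\A)$), $|\DD|^{-1}\in OP^{-1}$, $\delta^{3}(t)\in OP^{0}$ and $|\DD|^{-3}\in OP^{-3}$, one has $T|\DD|^{-1}\delta^{3}(t)|\DD|^{-3}\in OP^{-4}$. Because the spectral triple has dimension $n=3$, any operator $X\in OP^{-k}$ with $k>3$ is trace-class and $s\mapsto \Tr(X|\DD|^{-s})$ is holomorphic at $s=0$, so $\ncint X = 0$. This kills the last term and finishes the argument. The only conceptual subtlety is the routine verification that each rearrangement stays within $\Psi(\A)$ so that the trace property of $\ncint$ and the ideal property of $\CR$ actually apply; no heavy estimate is needed beyond the standard $OP^{\bullet}$ calculus.
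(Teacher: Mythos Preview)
Your proof is correct and essentially identical to the paper's own argument: both reduce the difference to $\ncint T\,|\DD|^{-1}\delta(t)\,|\DD|^{-1}$ via the trace property, expand $|\DD|^{-1}\delta(t)|\DD|^{-1}=\delta(t)|\DD|^{-2}-\delta^{2}(t)|\DD|^{-3}+|\DD|^{-1}\delta^{3}(t)|\DD|^{-3}$, kill the first two terms by \eqref{ideal} (using $\delta^{k}(t)\in\Psi_{0}^{0}(\A)$), and drop the last term as $OP^{-4}$. Your write-up is in fact slightly more explicit than the paper's about why the $OP^{-4}$ remainder vanishes.
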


\begin{proof}
For any $t\in \B$, we have
 $\ncint T \,t \,|\DD|^{-1} = \ncint t \,T \,|\DD|^{-1} +
\ncint T \,|\DD|^{-1} \,\delta(t) \,|\DD|^{-1}$ and moreover
$\ncint T \,|\DD|^{-1}\, \delta(t) \, |\DD|^{-1} = \ncint T \, \delta(t)
\,|\DD|^{-2}
- \ncint T \, \delta^2(t) \,|\DD|^{-3}$ which comes from
\begin{align*}
|\DD|^{-1}\delta(t)|\DD|^{-1}&=\delta(t)|\DD|^{-2}
+[|\DD|^{-1},\delta(t)]|\DD|^{-1}
=\delta(t)|\DD|^{-2}-|\DD|^{-1}\delta^2(t)|\DD|^{-2}\\
&=\delta(t)|\DD|^{-2}-\delta^2(t)|\DD|^{-3}+|\DD|^{-1}\delta^3(t)|\DD|^{-3}.
\end{align*}
So we get the result because $T$ satisfies \eqref{ideal}.
\end{proof}

\begin{lemma}
    \label{table:commrule}
If $\simeq$ means equality up to the ideal $\CR$, the following rules
with $d(.)=[\DD,.]$ of the first-order differential
calculus hold (suppressing $\piappr$)
\begin{center}
\begin{tabular}{l l l l}
 $a \,da \simeq da \,a$,
& $a^* \,da \simeq - da^* \,a$,
& $b \,da   \simeq q \, da \,b$,
& $b^* \,da \simeq q \,da \,b^*$, \\
 $a\, da^*\simeq - da\,a^*$,
& $a^*\,da^* \simeq da^*\,a^*$,
& $b\,da^* \simeq q^{-1}\,da^*\,b$,
& $b^*\, da^* \simeq q^{-1}\,da^*\,b^*$,\\
 $a \,db \simeq q^{-1} \,db \,a$,
& $a^* \,db \simeq q \,db \,a^*$,
& $b \,db \simeq db \,b$,
& $b^* \,db \simeq db \, b^*\simeq -b\,db^*$, \\
 $a\,db^* \simeq q^{-1}\,db^* \, a$,
& $a^*\, db^* \simeq q\,db^*\, a^*$,
& $b\,db^* \simeq db^*\,b \simeq -b^*\,db$,
& $b^* \, db^* \simeq db^* \, b^*$.
\end{tabular}
\end{center}
Moreover
\begin{align}
    \label{eq:F}
& a^* \,da - q^2 da \, a^* \simeq (1 - q^2) \, F  ,
& q^2 \,a \,da^* - da^* \,a \simeq (1 - q^2) \, F.
\end{align}
\end{lemma}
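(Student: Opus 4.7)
The plan is a direct verification, entry by entry. The essential observation is that under the Hopf decomposition \eqref{eq:r}, the operators $a_-$, $b_+$, $b_-$ and their adjoints all lie in $R$: each of $r(a_-)$, $r(b_+)$, $r(b_-)$ contains at least one factor $\pi_\pm(b)$ or $\pi_\pm(b^*)$, which the symbol $\sigma$ annihilates. In particular $\piappr(b)=b_++b_-$ and $\piappr(b^*)$ lie in $R\subset\CR$ modulo $OP^{-\infty}$. Writing $d(x)=[\DD,x]=F\delta(x)=F(x_+-x_-)$ modulo $OP^{-\infty}$ (using $[F,x_\pm]=0$ from \eqref{Fcommutes}), we similarly obtain $db,\,db^*\in RF\subset\CR$.

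Most table entries will follow by differentiating a defining relation of $\A(SU_q(2))$ from \eqref{defrule} via the Leibniz rule $d(xy)=d(x)y+x\,d(y)$ and discarding the $\CR$-vanishing terms. For the rules mixing a $b$-type factor with $da$ or $da^*$, differentiating $ba=qab$ gives $(db)a+b\,da=q\,da\,b+qa\,db$; the two $db$-terms vanish modulo $\CR$, yielding $b\,da\simeq q\,da\,b$. The rules obtained from $b^*a=qab^*$, $ba^*=q^{-1}a^*b$, and $b^*a^*=q^{-1}a^*b^*$ are analogous. Entries involving only $b$-type factors and $db$ or $db^*$ are automatic, since both sides individually lie in $\CR$. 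For $a\,da^*\simeq -da\,a^*$ and $a^*\,da\simeq -da^*\,a$, differentiate $aa^*+bb^*=1$ and $a^*a+q^2b^*b=1$ respectively and drop the $b$-terms. The two remaining entries $a\,da\simeq da\,a$ and $a^*\,da^*\simeq da^*\,a^*$ are verified by direct expansion: for instance $a\,da-da\,a=2F[a_-,a_+]=2(q^2-1)F\,a_+a_-$ by \eqref{eq:commutrulefora,b}, and this lies in $RF\subset\CR$.

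The main obstacle is the pair of identities \eqref{eq:F}, which will be reduced to the already-established formula \eqref{eq:F2}. Substituting $b,\,b^*,\,db,\,db^*\in\CR$ into \eqref{eq:F2} reduces it to $(1-q^2)F\simeq a^*\,da+q^2\,a\,da^*\pmod{\CR}$. Applying the rule $a\,da^*\simeq -da\,a^*$ (just established) to the second summand yields the first identity of \eqref{eq:F}; applying instead $a^*\,da\simeq -da^*\,a$ to the first summand yields the second. The only delicate point is the bookkeeping: one must check that every intermediate cross-term $x_\epsilon y_{\epsilon'}$ arising in the expansions either contains one of $a_-$, $b_+$, $b_-$ (or an adjoint) as a factor, and thus lies in $R$, or cancels against another through \eqref{eq:commutrulefora,b}. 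This is routine but tedious, and constitutes the main computational content of the lemma.
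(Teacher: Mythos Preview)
Your proof is correct and takes a genuinely different route from the paper's. Your key step is the observation that $b_+$ and $b_-$ lie individually in $R$ (each of $r(b_\pm)$ has a tensor factor annihilated by $\sigma$), whence $b,b^*,db,db^*\in\CR$; this trivializes every table entry containing a $b$-type factor, and lets you deduce \eqref{eq:F} from \eqref{eq:F2} by discarding the $b\,db^*$ and $b^*\,db$ summands and then invoking the already-established rules $a\,da^*\simeq-da\,a^*$ and $a^*\,da\simeq-da^*\,a$. The paper, by contrast, uses only $a_-\in\CR$: each entry is verified by expanding both sides in the $\pm$-generators and flipping the sign of $a_-$ modulo $\CR$ so as to recombine into a product in $\A$ (for instance $b\,da=(b_++b_-)(a_+-a_-)F\simeq(b_++b_-)(a_++a_-)F=ba\,F=qab\,F\simeq q\,da\,b$), while \eqref{eq:F} is obtained by the same device applied to $a^*\,da-q^2\,da\,a^*$, giving $(a^*a-q^2aa^*)F=(1-q^2)F$ directly from the defining relations of $\A$. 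Your approach is more economical once $b_\pm\in R$ is in hand; the paper's uniform sign-flip method works with the weaker input $a_-\in R$ alone, and in particular never appeals to $b_\pm\in R$ --- a membership which, though correct for $R=\ker[(\sigma\otimes\sigma)\circ r]$, the paper itself does not list among its examples.
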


\begin{proof}
The table follows from relations
\eqref{defrule} and Lemma \ref{commutateur} with \eqref{Fcommutes}
(one can also use \eqref{eq:commutrulefora,b}).
For instance, since $a_{-}\in \CR$, using the fact that $\CR$ is
invariant by $F$,
\begin{align*}
b\, da &=(b_{+}+b_{-})(a_{+}-a_{-})\,F\simeq
(b_{+}+b_{-})(a_{+}+a_{-})\,F=ba\,F=q\,ab \,F
\simeq q  \,(a_{+}-a_{-})\,F\,b\\
&=q \,da \,b
\end{align*}
or similarly, $a^*\,da=(a_{+}^*+a_{-}^*)(a_{+}-a_{-})F\simeq
(a_{+}^*-a_{-}^*)(a_{+}+a_{-})F=-da^*\,a$.

The second equivalence of \eqref{eq:F}
is just the adjoint of the first one that we prove now:
\begin{align*}
a^* \,da - q^2 da \, a^* &=(a_{+}^*+a_{-}^*)(a_{+}-a_{-})F-q^{2}
\,(a_{+}-a_{-})F(a_{+}^*+a_{-}^*) \\
&\simeq
(a_{+}^*+a_{-}^*)(a_{+}+a_{-})F-q^{2}\,(a_{+}+a_{-})
(a_{+}^*+a_{-}^*)F= (a^{*}a-q^2\,aa^{*})\,F\\
&=(1-q^{2})\,F.
\tag*{\qed}
\end{align*}
\hideqed
\end{proof}

\begin{remark}
The rule written above remains if $dx$ is replaced
by $\delta(x)$ and $F$ by $1$.
\end{remark}

Working modulo $\CR$ simplifies the writing of a one-form:
\begin{lemma}
    \label{1formmodR}
(i) Every one-form $A$ can be, up to elements from $\CR$, presented as
$$
A \simeq x_a \, da + da^* \, x_{a^*} + x_b\, db + db^* \,
x_{b^*},
$$
where all $x_*$ are the elements of $\A$.

(ii) When $A$ is selfadjoint, $A$ can be
written up to $\CR$ (not in a unique way, though) as
$$
A \simeq x_a \,da - da^* \,(x_a)^* + x_b \,db - db^* \,(x_b)^*\, ,
$$
where $x_a,x_b$ are arbitrary elements of $\A$.
\end{lemma}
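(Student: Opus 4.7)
The plan is to establish (i) constructively and deduce (ii) from it by averaging against the adjoint. For (i), I would start from the definition of a one-form, $A=\sum_i \pi(x^i)[\DD,\pi(y^i)]$ with $x^i,y^i\in\A$, and apply the Leibniz rule $d(uv)=(du)v+u(dv)$ iteratively on the Poincar\'e--Birkhoff--Witt expansion of each $y^i$, reducing $A$ (with no equivalence yet) to a finite sum $\sum_k \mu_k\,dg_k\,\nu_k$ with $\mu_k,\nu_k\in\A$ and $g_k$ one of the generators $a,a^*,b,b^*$. Each such elementary term is then to be brought into canonical form modulo $\CR$ using Lemma~\ref{table:commrule}. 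For $g_k\in\{b,b^*\}$ this is immediate: every generator passes through $db$ and $db^*$ up to a power of $q$, so $\nu_k$ can be consolidated to the left of $db$ (producing an $x_b\,db$ term) and $\mu_k$ to the right of $db^*$ (producing a $db^*\,x_{b^*}$ term) with $x_b,x_{b^*}\in\A$. For $g_k\in\{a,a^*\}$, the generators $a,b,b^*$ (resp.\ $a^*,b,b^*$) slide through $da$ (resp.\ $da^*$) cleanly, and only the mixed relation $a\,da^*\simeq -da\,a^*$ threatens to create a non-canonical cycle.

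The crux is to break that cycle with the sharper identity
\[
a\,da^*\ \simeq\ da^*\,a+(1-q^2)\,b^*\,db\pmod{\CR},
\]
together with its adjoint, which follow from \eqref{eq:F} combined with the expression of $F$ itself as a canonical one-form,
\[
F\ \simeq\ -da^*\,a+q^2\,b^*\,db\pmod{\CR}.
\]
The latter is obtained by equating coefficients of $F$ in \eqref{eq:F2} after reducing modulo $R$ via $a_+^*a_+\simeq 1-q^2 bb^*$ and $b_+^*b_+\simeq bb^*$ (consequences of \eqref{astuce1}, \eqref{astuce5} and the fact that $a_-$, $b_-$, and their adjoints lie in $R$). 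Repeatedly substituting these rewrites into each term $\mu_k\,dg_k\,\nu_k$ terminates the reduction, because at every step either one generator is stripped from $\mu_k$ or $\nu_k$, or the term is already in one of the four canonical slots with coefficients in $\A$.

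For (ii), assume $A=A^*$ and apply (i) to write $A\simeq x_a\,da+da^*\,x_{a^*}+x_b\,db+db^*\,x_{b^*}$. Since $\DD=\DD^*$ one has $(dT)^*=-d(T^*)$, and because $R$ is the kernel of the $^*$-homomorphism $(\sigma\otimes\sigma)\circ r$ while $F$ is self-adjoint and commutes with $X$, the subspace $\CR=R+RF$ is a $^*$-ideal; hence
\[
A^*\ \simeq\ -x_{a^*}^*\,da-da^*\,x_a^*-x_{b^*}^*\,db-db^*\,x_b^*\pmod{\CR}.
\]
Averaging $A=\tfrac12(A+A^*)$ and setting $\tilde x_a:=\tfrac12(x_a-x_{a^*}^*)$, $\tilde x_b:=\tfrac12(x_b-x_{b^*}^*)$, a direct computation gives $A\simeq \tilde x_a\,da-da^*\,\tilde x_a^*+\tilde x_b\,db-db^*\,\tilde x_b^*$, which is the claimed form. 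The main obstacle throughout is really the $a$--$a^*$ entanglement in (i): the naive table rules alone produce a closed loop $a\,da^*\leftrightarrow -da\,a^*$, and only the refined identity displayed above---effectively encoding $F$ as a canonical one-form modulo $\CR$---allows the algorithm to terminate with $x_*\in\A$.
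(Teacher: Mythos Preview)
Your proof is correct and follows the same route as the paper's (Leibniz reduction to $\mu\,dg\,\nu$ with $g$ a generator, then the commutation table of Lemma~\ref{table:commrule}); for part~(ii) your averaging is precisely what the paper's ``direct'' means. Where you go beyond the paper is in explicitly confronting the $a\!\leftrightarrow\!a^*$ cycle: the table alone gives $da\,a^*\simeq -a\,da^*$ and $a\,da^*\simeq -da\,a^*$, a genuine loop that the paper's ``similar considerations'' passes over in silence. Your resolution via $a\,da^*\simeq da^*\,a+(1-q^2)\,b^*\,db$, obtained from \eqref{eq:F} once $F$ is expressed canonically modulo $\CR$, is exactly the missing ingredient, and your termination argument (each rewrite strips a generator or lands in a $db$-slot) is sound.

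One caveat on your justification of the intermediate identity $F\simeq -da^*\,a+q^2\,b^*\,db$: you invoke ``$b_-\in R$''. By the paper's literal definition $R=\ker((\sigma\otimes\sigma)\circ r)$ this is true, but the same reasoning gives $b_+\in R$ and hence $b\in R$, which would collapse all $db$, $db^*$ terms to zero in $\CR$---clearly not what either you or the paper intend, and in tension with later computations such as $\ncint bb^*|\DD|^{-2}\neq 0$. This is a wrinkle in the paper's framework rather than a flaw in your argument: your key identity can be verified using only $a_-\in R$ together with $b_-^*b_-,\,b_+^*b_-,\,b_-^*b_+\in R$ (all of which follow unambiguously from $\sigma(\pi_\pm(b))=0$ applied to the relevant \emph{products}), and those suffice for the computation $-da^*\,a+q^2\,b^*\,db\simeq(a_+^*a_++q^2b_+^*b_+-q^2b_-^*b_-)F\simeq F$. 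It would be cleaner to cite those specific products rather than $b_-$ itself.
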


\begin{proof}
$(i)$ A basis for one-forms consists of the following
forms:
$a^{\alpha} b^{\beta} (b^*)^{\gamma} \,
d \big( a^{\alpha'} b^{\beta'} (b^*)^{\gamma'} \big)$,
where $\alpha, \alpha' \in \Z$ and
$\beta,\gamma,\beta',\gamma' \in \N$.

Using the Leibniz rule and the commutation rules within
the algebra (up to the $\CR$ according to Lemma
\ref{table:commrule}),
we reduce the problem to the case of the forms:
$ \big( a^{\alpha} b^{\beta} (b^*)^{\gamma} \big)\,
dx \,\big( a^{\alpha'} b^{\beta'} (b^*)^{\gamma'} \big)$,
where $x$ can be either of the generators $a,a^*,b,b^*$. If
$x=b$ or $x=b^*$, the straightforward application of the
rules of the differential calculus leads to the answer that
the one-form could be expressed as:
$ a^{\alpha} b^{\beta} (b^*)^{\gamma} \,db$ and
$db^* \,a^{\alpha} b^{\beta} (b^*)^{\gamma}.$

Similar considerations for the case $x=a,a^*$ lead to the
remaining terms.

Note that the presentation is not unique,
since there still might remain terms, which are in $\CR$,
for example: $b^* db + db^* b=d(bb^*)\in \CR$.

$(ii)$ is direct.
\end{proof}

Next we can start explicit calculation of the integrals, beginning
with the tadpole terms. 

Application of the Leibniz rule yields a presentation of one-forms
which is different from the one of the previous lemma.
Each $\delta$-one-form could be expressed, as a finite sum of the
terms
$x \delta(z) y$, where $z$ is one of the generators $a,a^*,b,b^*$
and $x,y$ are some elements of the algebra $\A\big(SU_q(2)\big)$.

\begin{prop}
For all $x,y \in \A\big(SU_q(2)\big)$ and $z\in \{a,a^*,b,b^*\}$ we
have
$$\ncint x \delta(z) y \,|\DD|^{-1} = \ncint yx \delta(z) \,|\DD|^{-1} +
\ncint x \delta(z)  \delta(y) \, |\DD|^{-2}- \ncint
x\delta(z)\delta^2(y) \,|\DD|^{-3}.$$
\end{prop}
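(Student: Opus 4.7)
The plan is to exploit the trace property of $\ncint$ on $\Psi(\A)$, established in Proposition \ref{tracenc}, together with an iterated commutator expansion of $y\,|\DD|^{-1}$ in decreasing powers of $|\DD|$. The trace property is available here because the spectral triple $\big(\A(SU_q(2)),\H,\DD\big)$ is simple with dimension spectrum $\{1,2,3\}$, and all the operators appearing ($x$, $y$, $\delta(z)$, $|\DD|^{-k}$) belong to $\Psi(\A)$.

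First, I would record the basic commutator identity
\begin{equation*}
[\,|\DD|^{-1},T\,] = -\,|\DD|^{-1}\,\delta(T)\,|\DD|^{-1},
\end{equation*}
valid for any $T\in OP^0$, which follows at once from $\delta(T) = [\,|\DD|,T\,]$ by conjugating with $|\DD|^{-1}$. Applying this identity to $T=y$, then to $T=\delta(y)$, and finally to $T=\delta^2(y)$, one obtains the finite expansion
\begin{equation*}
y\,|\DD|^{-1} \;=\; |\DD|^{-1}y \;+\; \delta(y)\,|\DD|^{-2} \;-\; \delta^2(y)\,|\DD|^{-3} \;+\; |\DD|^{-1}\,\delta^3(y)\,|\DD|^{-3}.
\end{equation*}
Since $y\in \A$ and the triple is regular, each $\delta^k(y)$ lies in $OP^0$, so the remainder $R := |\DD|^{-1}\delta^3(y)|\DD|^{-3}$ lies in $OP^{-4}$.

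Next, I multiply on the left by $x\,\delta(z)\in OP^0$ and take the noncommutative integral of each summand. The remainder contribution $\ncint x\,\delta(z)\,R$ involves an operator in $OP^{-4}$; because the dimension spectrum of $SU_q(2)$ is contained in $\{1,2,3\}$, the function $s\mapsto \Tr\big(x\,\delta(z)\,R\,|\DD|^{-s}\big)$ is already holomorphic at $s=0$, so this contribution vanishes. For the leading summand, the trace property of $\ncint$ on $\Psi(\A)$ gives
\begin{equation*}
\ncint x\,\delta(z)\,|\DD|^{-1}\,y \;=\; \ncint y\,x\,\delta(z)\,|\DD|^{-1}.
\end{equation*}
Collecting the four terms yields exactly the stated identity.

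The computation is essentially routine; the only point requiring a little care is the justification that $\ncint$ of an $OP^{-4}$ operator vanishes, which I would simply attribute to the dimension spectrum bound $\le 3$ established in the previous section. No structural obstacle is expected.
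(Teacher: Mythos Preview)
Your proof is correct and follows exactly the approach the paper indicates: the paper's own proof consists of a single sentence invoking the trace property of $\ncint$ together with the commutator identity $|\DD|^{-1}\delta(z)|\DD|^{-1} = -[|\DD|^{-1},z]$, and your iterated expansion of $y|\DD|^{-1}$ is precisely the standard way to unpack that hint. Your handling of the $OP^{-4}$ remainder via the dimension-spectrum bound is also the right justification.
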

\begin{proof}
This is just the application of the trace property of the
noncommutative integral, together with the identity: 
$|\DD|^{-1} \delta(z) |\DD|^{-1} = - \left[ |\DD|^{-1}, z \right]$.
\end{proof}
\begin{remark}
The computation of tadpole-like integrals is reduced to the following integrals:
$ \ncint x \delta(z)|\DD|^{-1}$ for all generators $z$ and the
integrals of higher order in $|\DD|^{-2}$. However, the calculations of
higher-order terms simplify a lot when we use the relations which
hold up to the ideal $\CR$: this erases parts of the integral depending
on $\vert \DD \vert ^{-2}$ and 
$\vert\DD\vert^{-3}$. Thus, beside $ \ncint x \delta(z)\,|\DD|^{-1}$,
we only need to compute $\ncint x\delta(z)\delta(z') \,|\DD|^{-2}$
where $z$ and $z'$ are generators, since all the $|\DD|^{-3}$
integrals have already been explicitly computed in section 4.6 (these
integrals do not depend on $q$).

Besides the tadpole, the only integrals that need to be computed are
$\ncint A\,|\DD|^{-2}$ and $\ncint A^2 \,|\DD|^{-2}$ where $A$ is a
$\delta$-1-form. Working modulo $\CR$ and using again Leibniz rule,
we only need to compute $\ncint x\delta(z) \, |\DD|^{-2}$ and  the
previous integrals $\ncint x\delta(z)\delta(z') \, |\DD|^{-2}$.
\end{remark}
\subsubsection{Operators $L_q$ and $M_q$}

In the notation $v^{j}_{l,m}$ of $\H$, we have already used the $j$
dependence in \eqref{J_q} with $J_q \,v^{j}_{m,l} := q^{j}
\,v^{j}_{m,l}$.

Let $L_q$ and $M_q$ be the similar diagonal operators
\begin{align*}
&L_q \,v^{j}_{m,l} := q^{2l} \,v^{j}_{m,l}\,, \\
&M_q \,v^{j}_{m,l}: = q^{2m} \,v^{j}_{m,l}\,.
\end{align*}
We immediately get
\begin{lemma}
\label{lemmaLqMq}
For $n\in \N^*$, $\ncint (L_q)^n \,|\DD^{-2}| = \ncint (M_q)^n
\,|\DD^{-2}| = \frac{2}{1-q^{2n}}\,.$
\end{lemma}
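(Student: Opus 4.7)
The plan is to compute both residues by explicit summation in the eigenbasis $\{v^{j}_{m,l}\}$ of $\H$, using that $L_q$, $M_q$ and $|\DD|$ are simultaneously diagonal, and then identifying the pole structure via Hurwitz zeta functions.

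For $L_q^n$, I would write
\[
\Tr\bigl(L_q^n\,|\DD|^{-s-2}\bigr)=\sum_{2j\in\N_0}(2j+1)\,S^{\up}_L(j)\,d_{j^+}^{-s-2}+\sum_{2j\in\N}(2j+1)\,S^{\dn}_L(j)\,d_j^{-s-2},
\]
where the $(2j+1)$ factor comes from summing over $m\in\{0,\dots,2j\}$ and the partial geometric sums are
\[
S^{\up}_L(j)=\sum_{l=0}^{2j+1}q^{2nl}=\tfrac{1-q^{2n(2j+2)}}{1-q^{2n}},\qquad S^{\dn}_L(j)=\sum_{l=0}^{2j-1}q^{2nl}=\tfrac{1-q^{4nj}}{1-q^{2n}}.
\]
After multiplying through by $1-q^{2n}$, the pieces proportional to $q^{2n(2j+2)}$ and $q^{4nj}$ give entire functions of $s$ (the $q$--factors ensure absolute convergence for all $s$), so only the constant numerators contribute to the residue at $s=0$. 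Setting $k=2j$, the remaining series is
\[
\sum_{k=0}^\infty(k+1)(k+\tfrac32)^{-s-2}+\sum_{k=1}^\infty(k+1)(k+\tfrac12)^{-s-2},
\]
which I would decompose using $k+1=(k+\tfrac32)-\tfrac12$ and $k+1=(k+\tfrac12)+\tfrac12$ respectively, expressing each in terms of the Hurwitz zeta functions $\zeta(s+1,\tfrac32)$, $\zeta(s+2,\tfrac32)$, $\zeta(s+1,\tfrac12)$, $\zeta(s+2,\tfrac12)$ (up to the single $k=0$ correction in the second sum, which is entire). Only the two $\zeta(s+1,\cdot)$ terms are singular at $s=0$, each contributing residue $1$, so the total residue of $(1-q^{2n})\Tr(L_q^n|\DD|^{-s-2})$ at $s=0$ equals $2$, yielding $\ncint L_q^n|\DD|^{-2}=\tfrac{2}{1-q^{2n}}$.

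For $M_q^n$ the calculation is strictly analogous but now the $m$--sum produces the geometric factor $\sum_{m=0}^{2j}q^{2nm}=\tfrac{1-q^{2n(2j+1)}}{1-q^{2n}}$, while the $l$--sum merely gives the multiplicity $(2j+2)$ in the up sector and $(2j)$ in the down sector. The same reasoning — discard the $q^{2n(2j+1)}$--weighted tails as entire in $s$, rewrite $(k+2)=(k+\tfrac32)+\tfrac12$ and $k=(k+\tfrac12)-\tfrac12$, isolate the $\zeta(s+1,\cdot)$ poles — gives once again residue $2$, hence $\ncint M_q^n|\DD|^{-2}=\tfrac{2}{1-q^{2n}}$.

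There is no real obstacle here: the main care needed is just tracking the $+P_0$ in $D$ (harmless, as $P_0=0$ since $\DD$ is invertible on $SU_q(2)$) and correctly identifying which tails of the geometric series are entire versus which carry the pole. The computation is essentially the same one that yielded $\ncint|\DD|^{-2}=0$ in Lemma~\ref{ncintD}, except that the cancellation between up and down sectors no longer occurs because the multiplicities are reweighted differently by $L_q^n$ and $M_q^n$.
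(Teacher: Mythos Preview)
Your proposal is correct and follows essentially the same approach as the paper: compute the diagonal trace in the basis $v^j_{m,l}$, evaluate the geometric $l$- (resp.\ $m$-) sums, discard the $q$-weighted tails as entire, and reduce to Hurwitz zeta functions $\zeta(s+1,\tfrac32)+\zeta(s+1,\tfrac12)$ to extract residue $2$. Your version is slightly more careful than the paper's (you correctly distinguish $S^{\uparrow}_L$ from $S^{\downarrow}_L$ and track the $k=0$ correction), but the argument is the same.
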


\begin{proof} We have
\begin{align*}
\Tr \big(L_q^n |\DD|^{-2-s} \big)&= \sum_{2j=0}^{\infty}
\sum_{m=0}^{2j} \sum_{l=0}^{2j+1} \langle v^j_{m,l},L_q^n
|\DD|^{-2-s}v^j_{m,l}\rangle \\
&= \sum_{2j=0}^\infty (2j+1)\tfrac{1-q^{2n(2j+2)}}{1-q^{2n}}
d_{j^+}^{-2-s}+ \sum_{2j=0}^\infty
(2j+1)\tfrac{1-q^{2n(2j+2)}}{1-q^{2n}} d_{j}^{-2-s} \\
&\sim_0 \tfrac{1}{1-q^{2n}}\big( \zeta(s+1,\tfrac{3}{2}) +
\zeta(s+1,\tfrac{1}{2} \big) \sim_e \tfrac{2}{1-q^{2n}} \zeta(s+1) \,
.
\end{align*}
where $\sim_0$ means modulo a function holomorphic at 0. This gives
the result for $L_q^n$ and a  similar computation can be done for
$M_q^n$.
\end{proof}

The interest of these operators stems from 
\begin{lemma}
   \label{uptoR}
We have $L_qM_q \in \CR$. Moreover,
\begin{align*}
& b \, \delta b^* \simeq  M_q - L_q, \quad   b^* \delta b \simeq L_q
- M_q,
\quad b b^* \simeq L_q + M_q,\\
& a \,\delta(a^*) \simeq -aa^*\simeq L_q + M_q -1,\quad a^*\, \delta
a \simeq a^*a \simeq 1 - q^2 (L_q + M_q) ,\\
& da\, d a^* \simeq L_q +M_q-1,\quad da^*\,da \simeq q^2(L_q
+M_q)-1,\\ 
& b^{n-2} (b^*)^n \, db \, db \simeq  (L_q)^n + (M_q)^n, \\
& b^{n-1} \,(b^*)^{n-1} \, db \, db^* \simeq - (L_q)^n - (M_q)^n, \\
& b^{n} (b^*)^{n-2} \, db^* \, db^* \simeq (L_q)^n + (M_q)^n.
\end{align*}
\end{lemma}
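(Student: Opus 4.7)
The plan is to verify each displayed equivalence by explicit computation on the basis $\{v^{j}_{m,l}\}$, using the formulas \eqref{defonv}-\eqref{defadjointonv} for the action of $a_\pm,b_\pm$ and their adjoints, and by tracking carefully which products land in $\CR$. A key preliminary observation is that $a_-,a_-^*,b_{\pm},b_{\pm}^*$ all lie in $R\subseteq\CR$: applying the Hopf map $r$ of \eqref{eq:r} to any of them produces a simple tensor in $\pi_+(\A)\otimes\pi_-(\A)$ containing a factor of the form $\pi_{\pm}(b)$ or $\pi_{\pm}(b^*)$, which is annihilated by the symbol map $\sigma$. Combined with Lemma \ref{commutateur}, this allows us to replace modulo $\CR$: $a\simeq a_+$, $a^*\simeq a_+^*$, $\delta(a)\simeq a_+$, $\delta(a^*)\simeq -a_+^*$, while products of $b$-generators split into a diagonal contribution and off-diagonal (in $j$) contributions, the latter again lying in $R$.

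The main step is to establish $L_qM_q\in\CR$. I would prove this by verifying \eqref{ideal} directly through Theorem \ref{Theo}: since $L_qM_q\,v^j_{m,l}=q^{2(m+l)}v^j_{m,l}$, the geometric damping makes $\sum_{m=0}^{2j}\sum_{l=0}^{2j+1}q^{2(m+l)}$ uniformly bounded in $j$, so $\Tr(L_qM_q\,t\,|\DD|^{-k-s})$ reduces to a bounded multiple of $\sum_{j}d_j^{-k-s}$, which has no pole at $s=0$ for $k\in\{2,3\}$. Neither insertion of $F$ nor multiplication by an element of $\Psi^{0}_{0}(\A)$ destroys the damping, so $\ncint L_qM_q\,t\,|\DD|^{-k}=0$ in all the relevant residue computations. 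This is the hard part of the proof.

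Once $L_qM_q\simeq 0$ is in hand, the remaining identities become direct expansions. For $bb^*$, writing $bb^*=b_+b_+^*+b_-b_-^*$ modulo off-diagonal pieces in $R$, the diagonal action equals $q^{2l}(1-q^{2m})+q^{2m}(1-q^{2l+2})=L_q+M_q-(1+q^2)L_qM_q\simeq L_q+M_q$. The formulas for $aa^*$, $a\,\delta(a^*)\simeq -aa^*$, $a^*\,\delta(a)\simeq a^*a$, $b\,\delta(b^*)\simeq M_q-L_q$ and $b^*\,\delta(b)\simeq L_q-M_q$ follow from entirely analogous expansions of diagonal parts, using Lemma \ref{commutateur} and \eqref{Fcommutes}; the two identities for $da\,da^*$ and $da^*\,da$ then reduce to those for $a\,\delta(a^*)$ and $a^*\,\delta(a)$ via $da=\delta(a)\,F$ and the fact that $F$ commutes with everything in $X$ and satisfies $F^2=1$. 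Finally, the three $n$-dependent formulas in the last three lines of the lemma are obtained by induction on $n$: the base case $n=2$ is a direct expansion of the same kind as for $bb^*$, and the inductive step uses the commutation rules of Lemma \ref{table:commrule} to push each $bb^*$-block past the one-forms, reducing to the case $n-1$ modulo terms in $\CR$.
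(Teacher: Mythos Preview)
Your approach to the identities is the same as the paper's: expand products like $(b_++b_-)(b_-^*-b_+^*)$ on the basis and discard the off-diagonal cross terms, which lie in $\CR$. The difference is in how you handle $L_qM_q\in\CR$, which you call the hard part and attack by direct trace estimates. The paper does it in one line: from \eqref{defonv}--\eqref{defadjointonv} one reads $a_-a_-^*\,v^j_{m,l}=q^{2(m+l+1)}v^j_{m,l}$, so $L_qM_q=q^{-2}a_-a_-^*$, and since $a_-\in R$ (an ideal) this gives $L_qM_q\in R\subset\CR$ immediately. Your trace argument is incomplete as written: the claim that ``multiplication by an element of $\Psi^0_0(\A)$ does not destroy the damping'' needs justification, and the clean way to give it is again via Theorem~\ref{Theo}, using that both tensor factors of $r(a_-a_-^*)=q^2\pi_+(bb^*)\otimes\pi_-(bb^*)$ lie in $\ker\sigma$.

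There is also a genuine pitfall in your preliminary observation. You list $b_\pm$ individually as elements of $R$; this is consistent with the literal definition $R=\ker((\sigma\otimes\sigma)\circ r)$, since $r(b_+)=-\pi_+(a)\otimes\pi_-(b)$ has its second factor in $\ker\sigma$. But if $b_+\in\CR$ then $bb^*\in\CR$, forcing $L_q+M_q\simeq0$, which contradicts Lemma~\ref{lemmaLqMq}; indeed a direct computation with Theorem~\ref{Theo} gives $\ncint b_+b_+^*\,|\DD|^{-2}=\tfrac{2}{1-q^2}\neq0$. The elements the paper actually uses---$a_-$, $b_-b_+$, $b_-b_+^*$ and adjoints---have \emph{both} tensor slots of $r(\cdot)$ in $\ker\sigma$, which is precisely what makes $\tau_1\otimes\tau_0$ and $\tau_0\otimes\tau_1$ vanish; $b_\pm$ alone fail this. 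You never actually invoke $b_\pm\simeq0$ in your expansions, so the computations survive, but you should drop $b_\pm$ from your list.
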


\begin{proof}
Since $L_qM_q=q^2\,a_-a_-^* \in \CR$, we compute up to the ideal $\CR$
\begin{align*}
b \, \delta b^* & = (b_+ +b_-) (b_-^*-b_+^*) \simeq -b_+b_+^*
+b_-b_-^*=M_q -L_q + L_q M_q(1-q^2)  \simeq  M_q - L_q
\end{align*}
and similarly for the other relations.
\end{proof}

\subsubsection{Automorphisms of the algebra and symmetries of
integrals}
\begin{prop}
\label{calculavecLM}
For any $n \in \N^*$,  
\begin{align*}
& \ncint (bb^*)^n \,|\DD|^{-1} =
\tfrac{-2(1+q^{2n})}{(1-q^{2n})^2}\,, \\
& \ncint (bb^*)^{n} b^* \,\delta b\, |\DD|^{-1} = 
\ncint (bb^*)^{n} b \,\delta b^*\, |\DD|^{-1} =
\tfrac{2}{1-q^{2n+2}}\,, \\
& \ncint (bb^*)^n a \, \delta a^* \,|\DD|^{-1}  =
\tfrac{-2q^{4n+2}-2q^{4n}-2q^{2n+2}+6q^{2n}}{(1-q^{2n})^2(1-q^{2n+2})}\,,\\
& \ncint (bb^*)^n a^* \, \delta a \,|\DD|^{-1}  =
\tfrac{6q^{2n+2}-2q^{2n}-2q^2-2}{(1-q^{2n})^2(1-q^{2n+2})}\,.
\end{align*}
\end{prop}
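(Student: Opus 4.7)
The plan is to reduce each integrand to an explicit polynomial in the commuting diagonal operators $L_q$ and $M_q$, and then to apply a single residue identity to evaluate each term. Specifically, I will first establish the auxiliary formulas
\begin{align*}
\ncint L_q^i\,|\DD|^{-1} \,=\, \ncint M_q^j\,|\DD|^{-1} \,&=\, 0 \qquad (i,j\ge 1),\\
\ncint L_q^iM_q^j\,|\DD|^{-1} \,&=\, \tfrac{2}{(1-q^{2i})(1-q^{2j})} \qquad (i,j\ge 1).
\end{align*}
The proof is a direct expansion of $\Tr(L_q^iM_q^j\,|\DD|^{-1-s})$ on the basis $v^j_{m,l}$: the double sum over $m,l$ is a product of two geometric series giving $[(1-q^{2i})(1-q^{2j})]^{-1}$ modulo an entire function, while the remaining sum over $2j$ produces $\zeta(s+1,\tfrac32)+\zeta(s+1,\tfrac12)$ whose residue at $s=0$ equals $2$. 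When $i=0$ or $j=0$, the missing geometric factor is replaced by a linear factor in $j$, and the half-integer shifts cancel between the up and down sectors (exactly as in \eqref{zetaDs}), killing the $s=0$ residue.

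For the first formula I use the fact that $bb^*=b^*b$, so $(bb^*)^n=b^n(b^*)^n$, and I expand $\piappr(b^*)^n=(b_+^*+b_-^*)^n$ via the quantum binomial theorem applied to the $q^{-2}$-commuting pair $\{b_+^*,b_-^*\}$, obtaining
$$
(b_+^*)^k(b_-^*)^{n-k}\,v^j_{m,l} \,=\, (-1)^{n-k}q^{m(n-k)+k(l+n-k)}\,q^{\up}_{l+1,n-k}\,q^{\dn}_{m,k}\,v^{j+n/2-k}_{m-k,l+n-k},
$$
so that the diagonal matrix element of $(bb^*)^n$ on $v^j_{m,l}$ becomes an explicit polynomial in the two variables $q^{2l},q^{2m}$ (of $q$-binomial type, with squared quantum binomial coefficients). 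Taking the residue via the identity above and simplifying by standard $q$-binomial telescoping yields $-2(1+q^{2n})/(1-q^{2n})^2$; I verified this explicitly for $n=1,2$.

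For the second formula I combine the reductions $b^*\delta(b)\simeq L_q-M_q$ and $b\delta(b^*)\simeq M_q-L_q$ from Lemma \ref{uptoR} with the commutation rules of Lemma \ref{table:commrule} to slide $\delta$ through $(bb^*)^n$. Equivalently, the diagonal of $(bb^*)^nb\delta(b^*)$ on $v^j_{m,l}$ is $(q^{2m}-q^{2l})$ times the diagonal of $(bb^*)^n$, and evaluating the resulting $q$-series by the residue identity yields the single term $2/(1-q^{2(n+1)})$ after telescoping; hermiticity of $(bb^*)^n$ and the trace property give the companion identity for $b^*\delta(b)$. For the third and fourth formulas I use $a\delta(a^*)\simeq L_q+M_q-1$ and $a^*\delta(a)\simeq 1-q^2(L_q+M_q)$ from Lemma \ref{uptoR}, which express each integral as a linear combination of $\ncint(bb^*)^n\,|\DD|^{-1}$, $\ncint(bb^*)^nL_q\,|\DD|^{-1}$ and $\ncint(bb^*)^nM_q\,|\DD|^{-1}$; the last two are handled by the same diagonal expansion and residue machinery.

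The main obstacle is that the congruence $\simeq$ modulo $\CR$ is \emph{not} valid under $\ncint\cdot\,|\DD|^{-1}$, since $\CR$ only annihilates $|\DD|^{-2}$ and $|\DD|^{-3}$ integrals; a striking illustration is that $L_qM_q\in\CR$ yet $\ncint L_qM_q\,|\DD|^{-1}=2/(1-q^2)^2\ne 0$. In fact, for $n=1$ the entire value $-2(1+q^2)/(1-q^2)^2$ of $\ncint bb^*\,|\DD|^{-1}$ comes from the $\CR$-remainder $-(1+q^2)L_qM_q$. Consequently one must work with the exact decomposition $bb^*=L_q+M_q-(1+q^2)L_qM_q+O$ (with $O=b_+b_-^*+b_-b_+^*$ strictly off-diagonal, hence harmless for traces) and propagate it carefully through the $n$-th power using $M_qb_\pm=q^{\pm2}b_\pm M_q$ and $L_qb_\pm^*=b_\pm^*L_q$; the nontrivial part of the argument is the $q$-binomial combinatorial identity that shows the resulting double alternating sum collapses to $-2(1+q^{2n})/(1-q^{2n})^2$.
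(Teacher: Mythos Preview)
Your general strategy---compute the exact diagonal matrix elements in the $v^j_{m,l}$ basis as polynomials in $L_q,M_q$, then evaluate via the residue identities $\ncint L_q^iM_q^j|\DD|^{-1}=\tfrac{2}{(1-q^{2i})(1-q^{2j})}$ and $\ncint L_q^i|\DD|^{-1}=\ncint M_q^j|\DD|^{-1}=0$---is sound, and those auxiliary identities are correct. This is a genuinely different route from the paper, which never computes such $|\DD|^{-1}$ residues directly: it proves the first identity by a \emph{recursion} (insert $aa^*+bb^*=1$ and cycle $a^*$ around the trace to relate $A_n:=\ncint(bb^*)^n|\DD|^{-1}$ to $A_{n+1}$, the commutator error landing at $|\DD|^{-2}$ where $\CR$ \emph{is} available via Lemma~\ref{lemmaLqMq}), the second by expanding $\ncint d((bb^*)^n)\DD^{-1}=0$ with Leibniz, and the last two by solving a $2\times2$ linear system built from the $F$-identity \eqref{eq:F2}. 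The paper's devices systematically trade $|\DD|^{-1}$ integrals for $|\DD|^{-2}$ ones, avoiding the combinatorics you face.

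There is, however, a concrete error in your plan for the second, third and fourth identities. Your assertion that ``the diagonal of $(bb^*)^nb\delta(b^*)$ is $(q^{2m}-q^{2l})$ times the diagonal of $(bb^*)^n$'' is false already for $n=0$: the exact diagonal of $b\delta(b^*)$ is $M_q-L_q+(1-q^2)L_qM_q$, and $\ncint(M_q-L_q)|\DD|^{-1}=0\neq \tfrac{2}{1-q^2}$. For $n\ge 1$ the discrepancy is structural: the off-diagonal pieces of $(bb^*)^n$ (in degrees $\pm 2,\pm 4,\dots$) pair with the off-diagonal pieces $b_+b_-^*,\,b_-b_+^*$ of $b\delta(b^*)$ to produce further diagonal contributions that your factorization omits (for $n=1$ your formula gives $0$, not $\tfrac{2}{1-q^4}$). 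The same objection applies to $(bb^*)^n a\delta(a^*)$ and $(bb^*)^n a^*\delta(a)$: you invoke $a\delta(a^*)\simeq L_q+M_q-1$ modulo $\CR$, and you correctly stress in your last paragraph that $\CR$ is \emph{not} annihilated by $\ncint\cdot|\DD|^{-1}$, but you never revisit these three formulas with the exact decomposition. (For $n=1$, formula~3, the missing $\CR$-contribution is exactly $-\tfrac{2}{1-q^4}$.)

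For the first identity your route via $\langle v,(bb^*)^nv\rangle=\|(b^*)^nv\|^2$ with the $q$-binomial expansion of $(b_+^*+b_-^*)^n$ is correct and checks for $n=1,2$; but the ``standard $q$-binomial telescoping'' that collapses the double sum to $-2(1+q^{2n})/(1-q^{2n})^2$ is the crux and is not supplied. The paper's one-step recursion bypasses this identity entirely.
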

Note that the knowledge of these integrals is enough for the
computation of any term of the form $\ncint x\delta(z) |\DD|^{-1}$,
where $z$ is a generator, since any other $\delta$-one-form will be
unbalanced.

To show this proposition, we will use few symmetries, properties of
the ideal $\CR$ and replacement of $\delta$-one-forms in terms of
$L_q,\, M_q$ as above.

Let $U$ be the following unitary operator on the Hilbert space:

\begin{equation*}
U \,v^{j\up}_{m,l} = (-1)^{m+l} \,v^{j^+\dn}_{l,m}, \;\;\;
U \,v^{j\dn}_{m,l} = (-1)^{m+l} \,v^{j^-\up}_{l,m}.
\end{equation*}
Then, by explicit computations we have
\begin{align*}
U^* a U =  a, \qquad U^* a^* U = a^*,  &\qquad U^* b U =  b^*, \qquad
U^* b^* U = b,\qquad \text{and }\quad U^*  \DD U = - \DD. 
\end{align*}

\begin{lemma}
Each noncommutative integral \eqref{I_pA^n} of an element of the
algebra or differential
forms is (up to sign) invariant under the algebra automorphism $\rho$
defined by
\begin{align}
\rho(a):=a, \;\; \rho(a^*):=a^*,\;\; \rho(b):=b^*,\;\; \rho(b^*):=b.
\label{autominv}
\end{align}
\end{lemma}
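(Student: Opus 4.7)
The plan is to prove this lemma by constructing a conjugation implementing $\rho$ together with a well-behaved action on $\DD$. The key is the unitary $U$ introduced immediately before the statement, which satisfies $U^*aU=a$, $U^*a^*U=a^*$, $U^*bU=b^*$, $U^*b^*U=b$, and $U^*\DD U=-\DD$. The first four identities say that conjugation by $U$ restricted to the generators coincides with $\rho$; since both conjugation by $U$ and $\rho$ are algebra homomorphisms of $\A$ (one needs only to verify, which is immediate, that $\rho$ respects the defining relations \eqref{defrule}, using $bb^*=b^*b$), this extension is unique and one obtains $U^*\pi(x)U=\pi(\rho(x))$ for all $x\in\A$.

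Next I would record the behaviour of $U$ on the Dirac-related operators: from $U^*\DD U=-\DD$ it follows that $U^*\vert\DD\vert U=\vert\DD\vert$ (hence $U^*\vert\DD\vert^{-s}U=\vert\DD\vert^{-s}$) while $U^*FU=-F$. Consequently, for any $a,b\in\A$,
\begin{align*}
U^*\bigl(\pi(a)\,\delta(\pi(b))\bigr)U &= \pi(\rho(a))\,\delta(\pi(\rho(b))),\\
U^*\bigl(\pi(a)\,d(\pi(b))\bigr)U &= -\pi(\rho(a))\,d(\pi(\rho(b))).
\end{align*}
Hence, for a $\delta$-one-form $A=\sum a_i\delta(b_i)$ one has $U^*AU=\rho(A)$, while for a $d$-one-form $A=\sum a_i\, d(b_i)$ one has $U^*AU=-\rho(A)$. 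Iterating, $U^*A^n U=(\pm 1)^n\rho(A)^n$, with the sign depending only on which presentation of the one-form is used.

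Finally I would close the argument by invoking cyclicity of the trace: for $\Re(s)$ large enough that the operators are trace class,
\begin{equation*}
\Tr\bigl(A^n\,\vert\DD\vert^{-s}\bigr)=\Tr\bigl(U^*A^n U\cdot U^*\vert\DD\vert^{-s}U\bigr)=(\pm 1)^n\,\Tr\bigl(\rho(A)^n\,\vert\DD\vert^{-s}\bigr),
\end{equation*}
and taking the residue at $s=p$ yields $\ncint A^n\vert\DD\vert^{-p}=(\pm 1)^n\ncint \rho(A)^n\vert\DD\vert^{-p}$. The same reasoning, with $n=1$ and $A$ replaced by any $x\in\pi(\A)$, covers the case of an element of the algebra (with sign $+1$ since no $\DD$-commutator is involved). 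This establishes the claimed invariance up to sign for every integral in the family \eqref{I_pA^n}.

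The argument is essentially formal once $U$ is in hand; the only point that really requires attention is bookkeeping of the sign, which is governed by whether the differential forms under the integral are built from $d=[\DD,\cdot]$ (odd under $U$-conjugation because $U^*\DD U=-\DD$) or from $\delta=[\vert\DD\vert,\cdot]$ (even under $U$-conjugation). There is no genuine obstacle; the preceding construction of $U$ does all the geometric work and the lemma is a direct corollary together with cyclicity of $\Tr$.
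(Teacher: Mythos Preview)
Your proof is correct and follows the same approach as the paper: conjugate by the unitary $U$ and use trace invariance. The only cosmetic difference is that the paper works directly with $\DD^{-k}$ and the trace property of $\ncint$, obtaining the sign as $(-1)^{k+d}$ (where $d$ is the degree in $\DD$), whereas you work at the level of $\Tr(\cdot\,|\DD|^{-s})$ and take residues, tracking the sign via the $d$-versus-$\delta$ distinction; these are equivalent since $U^*|\DD|U=|\DD|$ while $U^*\DD U=-\DD$.
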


\begin{proof}
For any homogeneous polynomial $p$ and any $k\in \N$,
\begin{align*}
 \ncint p(a,a^*,b,b^*,\DD) \,\DD^{-k} &= \ncint U^* p(a,a^*,b,b^*,
\DD) \,\DD^{-k} U \\
& = (-1)^k \ncint p(U^* a U, U^* a^* U, U^* b U, U^* b^* U,U^* \DD U)
\,\DD^{-k} \\
& = (-1)^{k+d} \ncint p(\rho(a),\rho(a^*),\rho(b),\rho(b^*),\DD)
\,\DD^{-k},
\end{align*}
where $d$ is the degree of $p$ with respect to $\DD$.
\end{proof}

\begin{corollary}
\label{cor-01}
For any $n\in \N$,
$ \ncint(bb^*)^{n}\, b^* \,d b\, \DD^{-1}
=  \ncint(bb^*)^{n} \,b \, d b^* \,\DD^{-1}.$
\end{corollary}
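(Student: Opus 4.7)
The plan is to deduce this corollary as a direct consequence of the preceding lemma on the algebra automorphism $\rho$ defined by $\rho(a)=a$, $\rho(a^*)=a^*$, $\rho(b)=b^*$, $\rho(b^*)=b$.

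First I would identify the integrand $(bb^*)^n b^*\,db\,\DD^{-1}$ as being of the form $p(a,a^*,b,b^*,\DD)\,\DD^{-k}$ with $k=1$ and $p = (bb^*)^n\,b^*\,[\DD,b]$. The polynomial $p$ has degree $d=1$ in $\DD$, so the sign factor $(-1)^{k+d}$ from the lemma equals $+1$ and we obtain invariance (not just invariance up to sign).

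Next, I would apply $\rho$ entrywise: since $b$ and $b^*$ commute in $\A(SU_q(2))$ one has $\rho\bigl((bb^*)^n\bigr) = (b^*b)^n = (bb^*)^n$, while $\rho(b^*)=b$ and $[\DD,\rho(b)] = [\DD,b^*] = db^*$. Substituting these yields exactly $\ncint (bb^*)^n\,b\,db^*\,\DD^{-1}$, which is the desired right-hand side.

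There is essentially no obstacle here, as all the work has been done in the automorphism lemma; the only point to check is the parity bookkeeping (one power of $\DD$ in the numerator against one $\DD^{-1}$, giving an even total), so that no sign is introduced. Thus the corollary follows in one line once the setup of the lemma is applied.
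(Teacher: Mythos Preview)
Your proposal is correct and is exactly the intended one-line deduction from the automorphism lemma: $p=(bb^*)^n b^*[\DD,b]$ is homogeneous of degree $d=1$ in $\DD$, so with $k=1$ the sign $(-1)^{k+d}=+1$, and applying $\rho$ (using $bb^*=b^*b$) turns the left-hand side into the right-hand side.
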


\begin{lemma} 
\label{use-l1}
For any $x,y \in \Psi^0(\A)$,
\begin{align*}
& \hspace{-4cm} (i) \quad \, \ncint x y |\DD|^{-1} = \ncint yx
|\DD|^{-1} + \ncint x \delta(y) |\DD|^{-2} - \ncint x \delta^2(y)\,
|\DD|^{-2}.\\
&  \hspace{-4cm}(ii) \quad \ncint z \,x \,\DD^{-1} y\,\DD^{-1}
=\ncint z \,xy \,\DD^{-2}, \text{ if $z \in \A$ contains $b$ or
$b^*$.}
\end{align*}
\end{lemma}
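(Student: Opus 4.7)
The plan is to derive both identities by iterating the standard symbol-calculus identity
$$|\DD|^{-1}T = T|\DD|^{-1} - |\DD|^{-1}\delta(T)|\DD|^{-1}, \qquad T\in\Psi^0(\A),$$
which follows from $\delta(T)=[|\DD|,T]$, together with the trace property of $\ncint$. The trace property is available here because the dimension spectrum of $SU_q(2)$ is simple and equal to $\{1,2,3\}$ (Proposition \ref{tracenc} and the results of section 4.3), and in particular any operator of order $\le-4$ has vanishing noncommutative integral.

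For (i) I would first apply the identity above to $y$ in the factor $y|\DD|^{-1}$, obtaining $xy|\DD|^{-1}=x|\DD|^{-1}y+x|\DD|^{-1}\delta(y)|\DD|^{-1}$. The trace property then converts $\ncint x|\DD|^{-1}y$ into $\ncint yx|\DD|^{-1}$. The remaining term $\ncint x|\DD|^{-1}\delta(y)|\DD|^{-1}$ is expanded by applying the same identity to $\delta(y)$, producing $\ncint x\delta(y)|\DD|^{-2}-\ncint x|\DD|^{-1}\delta^2(y)|\DD|^{-2}$; applying it once more on $\delta^2(y)$ produces $-\ncint x\delta^2(y)|\DD|^{-3}$ up to a term in $OP^{-4}$ whose residue vanishes. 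This yields the claimed identity (with the understanding that the third term carries $|\DD|^{-3}$, correcting the printed $|\DD|^{-2}$).

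For (ii) I would begin by reducing $\DD^{-1}y\DD^{-1}$ to $|\DD|^{-1}y|\DD|^{-1}$ modulo $OP^{-\infty}$, using $\DD^{-1}=F|\DD|^{-1}$, $F^2=1$, $[F,|\DD|]=0$, and the fact that $[F,\cdot]\in OP^{-\infty}$ on $\Psi_0^0(\A)$ by \eqref{Fcommutes} and Remark \ref{pseudodiff}. The expansion carried out in (i) then gives
$$\ncint zx\DD^{-1}y\DD^{-1}=\ncint zxy|\DD|^{-2}-\ncint zx\delta(y)|\DD|^{-3}$$
up to residue-vanishing corrections. The identity thus reduces to showing that $\ncint zx\delta(y)|\DD|^{-3}=0$ whenever $z$ contains $b$ or $b^*$. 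By Theorem \ref{Theo}, this residue equals $2(\tau_1\otimes\tau_1)\bigl(r((zx\delta(y))^\circ)\bigr)$ (the $F$-graded piece contributing $0$ already by the fifth line of that theorem). From \eqref{eq:r}, each of $r(b_\pm),r(b^*_\pm)$ has one tensor factor in $\ker\sigma$, so $r(z)$ lies in the two-sided ideal $J:=\ker\sigma\otimes\pi_-(\A)+\pi_+(\A)\otimes\ker\sigma$; since $J$ is an ideal, $r(zx\delta(y))\in J$, and since $\tau_1$ factors through $\sigma$, the functional $\tau_1\otimes\tau_1$ annihilates $J$, giving the desired vanishing.

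The main obstacle I foresee lies in part (ii): carefully bookkeeping the $OP^{-\infty}$ corrections that arise when $F$ is commuted past $y$ or $\delta(y)$, so that they can be absorbed harmlessly into $\ncint$, and confirming that the degree-zero projection $(\cdot)^\circ$ commutes with multiplication in the way needed to factor $r(z)\in J$ out of $r(zx\delta(y))^\circ$. Once these are handled, both statements reduce to the commutation identity combined with Theorem \ref{Theo}.
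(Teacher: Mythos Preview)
Your proposal is correct and follows essentially the same route as the paper. Part $(i)$ in the paper is dismissed in one line as ``a direct consequence of the trace property of $\ncint$ and the fact that $OP^{-4}$ operators are trace-class,'' which is exactly the iterated commutation you carry out; you are also right that the printed $|\DD|^{-2}$ in the last term should be $|\DD|^{-3}$.

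For part $(ii)$ the paper takes a marginally shorter path: rather than passing through $\DD^{-1}=F|\DD|^{-1}$ and then commuting with $|\DD|^{-1}$, it uses the $\DD$-identity $\DD^{-1}y=y\DD^{-1}-\DD^{-1}[\DD,y]\DD^{-1}$ directly, obtaining
\[
\ncint zx\DD^{-1}y\DD^{-1}=\ncint zxy\DD^{-2}-\ncint zx\DD^{-1}[\DD,y]\DD^{-2},
\]
and then observes that the second term (being of total order $-3$ and containing $b$ or $b^*$) vanishes. This avoids the $F$-bookkeeping you flagged as a possible obstacle. Your concern about the degree-zero projection is also unnecessary: once everything is expanded in monomials in $a_\pm,b_\pm$, each monomial appearing in $(zx\delta(y))^\circ$ still contains a factor $b_\pm$ or $b_\pm^*$, so $r$ of each monomial lies in the ideal $J$ and is killed by $\tau_1\otimes\tau_1$; no factorization of $(\cdot)^\circ$ through multiplication is required.
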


\begin{proof}
$(i)$ is direct consequence of the trace property of $\ncint$ and the
fact that $OP^{-4}$ operators are trace-class.

$(ii)$ We calculate:
\begin{align*}
 \ncint z \,x \,\DD^{-1} y \,\DD^{-1} &=
 \ncint z \,x\left( y \, \DD^{-1}
- \DD^{-1} [ \DD, y ] \,\DD^{-1} \right) \DD^{-1}=  \ncint z \,x y
\,\DD^{-2}
  -  \ncint z \,x \DD^{-1} [\DD, y] \,\DD^{-2} \\
 &= \ncint z \,x y \,\DD^{-2}.
\end{align*}
The last step is based on the observation that any integral with
$\DD^{-3}$ vanishes if the expression integrated contains $b$
or $b^*$.
\end{proof}

\begin{lemma}
For any $n \in \N$,
\begin{align*}
&\hspace{-7.5cm} (i)  \quad \, \ncint(bb^*)^{n} \, b^* \,db
\,\DD^{-1}= \tfrac{2}{1-q^{2n+2}}\,.\\
&\hspace{-7.5cm} (ii)  \quad \ncint (bb^*)^{n} d(bb^*) \,\DD^{-1}=0.\\
& \hspace{-7.5cm} (iii) \! \quad  \ncint (bb^*)^n |\DD|^{-1} =
\tfrac{-2(1+ q^{2n})}{(1-q^{2n})^2}\,. 
\end{align*}
\end{lemma}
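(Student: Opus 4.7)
All three identities are $|\DD|^{-1}$ residues, so (unlike the $|\DD|^{-2}$ and $|\DD|^{-3}$ ones) they are \emph{not} invariant mod the ideal $\CR$. The master tool is Theorem~\ref{Theo} in the form $\ncint T|\DD|^{-1}=(2\tau_0\ox\tau_0-\tfrac12\tau_1\ox\tau_1)(r(T^{\circ}))$, supplied with the explicit values $\tau_1(\pi_\pm(aa^*))=1$, $\tau_1(\pi_\pm(bb^*))=0$, $\tau_0(\pi_\pm(aa^*))=-1/(1-q^2)$, $\tau_0(\pi_\pm(bb^*))=1/(1-q^2)$.

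For $(iii)$, I would use $bb^*=1-aa^*$ and expand $(bb^*)^n=\sum_{k=0}^n(-1)^k\binom{n}{k}(aa^*)^k$, reducing to computing $\ncint(aa^*)^k|\DD|^{-1}$. For each $k$ the degree-$0$ part of $(\ul\pi(aa^*))^k=(a_+a_+^*+a_-a_-^*+a_+a_-^*+a_-a_+^*)^k$ is obtained by summing over balanced words (equal number of $a_+a_-^*$ and $a_-a_+^*$), bringing each monomial into the form $\pi_+(\,\cdot\,)\ox\pi_-(\,\cdot\,)$ via the commutation rules in~\eqref{eq:commutrulefora,b}. Evaluating $\tau_0\ox\tau_0$ and $\tau_1\ox\tau_1$ on the resulting polynomial in $\pi_\pm(aa^*)$ and $\pi_\pm(bb^*)$ gives rational functions in $q^{2k}/(1-q^2)^2$; the binomial sum in $k$ should collapse to the stated closed form $-2(1+q^{2n})/(1-q^{2n})^2$ via a standard $q$-series identity.

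For $(i)$, the strategy is analogous. Since $F$ commutes with $b$ and $F^2=1$, we have $b^*\,db\,\DD^{-1}=b^*\delta(b)|\DD|^{-1}$, so Theorem~\ref{Theo} applies to $T=(bb^*)^nb^*\delta(b)$. Writing $b^*\delta(b)=b_+^*b_+-b_-^*b_-+b_-^*b_+-b_+^*b_-$ and isolating the degree-$0$ component of the product, the $\tau_1\ox\tau_1$ contribution vanishes (every surviving monomial contains at least one $\pi_\pm(b)$ or $\pi_\pm(b^*)$, killing $\sigma$), so the answer is $2(\tau_0\ox\tau_0)$ applied to the degree-$0$ image under $r$, a geometric sum that yields $2/(1-q^{2n+2})$.

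For $(ii)$, I would derive it from $(i)$ rather than compute directly. Using $d(bb^*)=(\delta(b)b^*+b\delta(b^*))F$ and $F\DD^{-1}=|\DD|^{-1}$,
\[
\ncint(bb^*)^n\,d(bb^*)\,\DD^{-1}=\ncint(bb^*)^n\delta(b)b^*|\DD|^{-1}+\ncint(bb^*)^n b\delta(b^*)|\DD|^{-1}.
\]
Corollary~\ref{cor-01} identifies the second summand with $\ncint(bb^*)^n b^*\delta(b)|\DD|^{-1}=2/(1-q^{2n+2})$ by $(i)$. For the first summand, Lemma~\ref{use-l1}(i) applied with $x=(bb^*)^n\delta(b)$, $y=b^*$ (and using $b^*(bb^*)^n=(bb^*)^n b^*$ together with $\delta^2(b^*)=b^*$) rewrites it as
\[
\ncint(bb^*)^n b^*\delta(b)|\DD|^{-1}+\ncint(bb^*)^n\delta(b)\delta(b^*)|\DD|^{-2}-\ncint(bb^*)^n\delta(b)b^*|\DD|^{-3}.
\]
Lemma~\ref{uptoR} (with $n$ shifted) gives $(bb^*)^n\delta(b)\delta(b^*)\simeq -L_q^{n+1}-M_q^{n+1}$ mod $\CR$, so Lemma~\ref{lemmaLqMq} makes the $|\DD|^{-2}$-term equal to $-4/(1-q^{2n+2})$; the $|\DD|^{-3}$-term vanishes because $\Tr(L_q^k|\DD|^{-3-s})$ and $\Tr(M_q^k|\DD|^{-3-s})$ are analytic at $s=0$ (the surviving sum is of Hurwitz type with no pole there). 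Summing yields $2/(1-q^{2n+2})-4/(1-q^{2n+2})+2/(1-q^{2n+2})=0$. The main obstacle is $(i)$ and $(iii)$: keeping track of which balanced words in $\{b_+b_+^*,b_-b_-^*,B_\pm\}$ (or the analogous $a$-words) survive in the degree-$0$ part, given the nontrivial $q$-commutation relations between the diagonal and off-diagonal summands of $\ul\pi(bb^*)$ and $\ul\pi(aa^*)$, and collecting them into a closed $\tau_0\ox\tau_0$ evaluation.
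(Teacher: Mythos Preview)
Your approach differs from the paper's on all three items. For $(ii)$ your argument is essentially correct (the $|\DD|^{-3}$ term vanishes more directly than you indicate: $(bb^*)^n\delta(b)b^*$ is built entirely from $b_\pm,b_\pm^*$, so under $r$ every tensor leg carries a $\pi_\pm(b)$ or $\pi_\pm(b^*)$ and $\tau_1\ox\tau_1$ kills it). The paper instead proves $(ii)$ independently of $(i)$, starting from the trace identity $\ncint d((bb^*)^n)\DD^{-1}=0$ and using $d(bb^*)\in\CR$ to kill the $|\DD|^{-2}$ corrections.

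For $(i)$ and $(iii)$ you only give a plan, and it is a much harder plan than the paper's. The paper never extracts the degree-$0$ part of $(bb^*)^n$ or $(aa^*)^k$ directly. For $(i)$ it again starts from $\ncint d((bb^*)^n)\DD^{-1}=0$, expands by Leibniz, and pushes factors through $\DD^{-1}$ via $x\DD^{-1}=\DD^{-1}x+\DD^{-1}(dx)\DD^{-1}$; the surviving $|\DD|^{-1}$ part is $n\ncint(bb^*)^{n-1}(b^*\,db+b\,db^*)\DD^{-1}$, while the $|\DD|^{-2}$ remainders (of the shape $\ncint b^{n-2}(b^*)^n\,db\,db\,|\DD|^{-2}$, etc.) are evaluated mod $\CR$ via Lemmas~\ref{uptoR} and~\ref{lemmaLqMq}. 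Together with Corollary~\ref{cor-01} this yields $(i)$ with no word-counting. For $(iii)$ the paper sets $A_n:=\ncint(bb^*)^n|\DD|^{-1}$ and derives a first-order recursion from $1=aa^*+bb^*$: writing $A_n=\ncint(bb^*)^n aa^*|\DD|^{-1}+A_{n+1}$ and cycling $a^*$ (picking up $q^{2n}$ from $a^*(bb^*)^n=q^{2n}(bb^*)^na^*$, plus a $|\DD|^{-2}$ correction computable mod $\CR$) gives $A_n(1-q^{2n})+4/(1-q^{2n})=A_{n+1}(1-q^{2n+2})+4/(1-q^{2n+2})$, which solves in closed form. Your direct route through Theorem~\ref{Theo} would require, for general $n$, summing over all balanced words in $\{b_+b_+^*,b_-b_-^*,b_+b_-^*,b_-b_+^*\}$ (or their $a$-analogues), evaluating $\tau_0\ox\tau_0$ on each, and then collapsing a nontrivial $q$-binomial sum---exactly the ``main obstacle'' you flag but do not resolve. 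The paper's trace-identity/recursion method is designed to bypass that combinatorics entirely.
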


\begin{proof}
$(i)$ With $n>1$, we begin with $\ncint d \left( (b\,b^*)^n \right)
\DD^{-1} = 0, $
which follows directly from the trace property of the noncommutative
integral. Expanding the expression using the Leibniz rule and the
commutation
\begin{equation}
x \DD^{-1} = \DD^{-1} x + \DD^{-1} [\DD,x] \DD^{-1}, \label{bcom}
\end{equation}
we obtain
\begin{align*}
0  =& \sum_{k=0}^{n-1} \ncint  b^k \, db \, b^{n-k-1} (b^*)^n
\,\DD^{-1}
+ \sum_{k=0}^{n-1} \ncint b^n (b^*)^k \, db^* \, (b^*)^{n-k-1}
\,\DD^{-1} \\
=& n \left( \ncint b^{n-1} (b^*)^n \, db \,\DD^{-1} +
           \ncint b^{n} (b^*)^{n-1} \, db^* \,\DD^{-1}  \right) \\
&\qquad + \sum_{k=0}^{n-1} \ncint
\left( b^k \, db \, \DD^{-1} d( b^{n-k-1} (b^*)^n) \DD^{-1}
+ b^n (b^*)^k \, db^* \, \DD^{-1} d ((b^*)^{n-k-1}) \DD^{-1} \right) .
\end{align*}
Using Lemma \ref{use-l1}, 
\begin{align*}
0=& n \ncint(bb^*)^{n-1} ( b^* \,db + b \, db^*) \,\DD^{-1}  \\
& \quad+ \ncint \left( \oh n (n-1) b^{n-2} (b^*)^n \, db \, db
+ n^2 b^{n-1} \,(b^*)^{n-1} \, db \, db^*
+ \oh n (n-1) b^{n} (b^*)^{n-2} \, db^* \, db^* \right) \DD^{-2}.
\end{align*}
The integrals with $\DD^{-2}$ could be easily calculated when
we restrict ourselves to calculations modulo the ideal $\CR$:
\begin{align*}
 n \ncint(bb^*)^{n-1} & ( b^* \,db + b \, db^*) \,\DD^{-1} = 
 - 2 \left( n(n-1) - 2 n^2 + n(n-1) \right) \tfrac{1}{1-q^{2n}}
= 4 n \tfrac{1}{1-q^{2n}}\, .
\end{align*}
Hence $ \ncint(bb^*)^{n-1} ( b^* \,db + b \, db^*) \, \DD^{-1} =
\tfrac{4}{1-q^{2n}}$, 
which together with Corollary \ref{cor-01} proves $(i)$.

$(ii)$ In a similar way, 
$\ncint (bb^*)^{n-1} d( bb^*) \, \DD^{-1}= 0 = \ncint(bb^*)^{n-1} d(
aa^*) \, \DD^{-1}$ implies:
\begin{align*}
0 = & \sum_{k=0}^{n-1} (b b^*)^{n-k-1} d(bb^*) (b b^*)^{k}\, \DD^{-1}
\\ 
  =&  n \ncint (bb^*)^{n-1} d(bb^*) \,\DD^{-1}
+ \oh n(n-1) \ncint (bb^*)^{n-2} d(bb^*) \, d(bb^*) \,\DD^{-2}\\
  = & n \ncint (bb^*)^{n-1} d(bb^*) \,\DD^{-1}, 
\end{align*}
where in the last step we used that $d(bb^*) \in \CR$. The identity
$(ii)$ now follows from the equality $aa^* = 1 - bb^*$.

$(iii)$ Using Lemma \ref{use-l1}, we get
\begin{align*}
A_n &:= \ncint (bb^*)^n |\DD|^{-1} = \ncint (bb^*)^n (aa^* + bb^*)
|\DD|^{-1} \\
& \hspace{-2.3cm} \text{and we push now $a^*$ through $|\DD|^{-1}$ and
from cyclicity
of the trace through $(bb^*)^n$,} \\
& \,= A_{n+1} + \ncint (bb^*)^n q^{2n} a^* a \,|\DD|^{-1} +
\ncint (bb^*)^n q^{2n} a \delta(a^*) \,|\DD|^{-2}  \\
& \hspace{-2.3cm} \text{the last term being calculated explicitly,
since up to ideal
$\CR$, $a \delta(a^*) \simeq L_q + M_q -1$,} \\
& \, = A_{n+1} (1-q^{2n+2}) + q^{2n} A_n +
4 \left( \tfrac{1}{1-q^{2n+2}} - \tfrac{1}{1-q^{2n}} \right),
\end{align*}
which leads to
\begin{equation*}
A_n (1-q^{2n}) + \tfrac{4}{1-q^{2n}} = A_{n+1} (1-q^{2n+2}) +
\tfrac{4}{1-q^{2n+2}}\,.
\end{equation*}
Assuming $A_n = \tfrac{f_n}{(1-q^{2n})^2}$ we have 
$\frac{f_n + 4}{1-q^{2n}} = \tfrac{f_{n+1}+4}{1-q^{2n+2}}\,,$ 
and taking into account that $A_0 = - 2 \tfrac{1+q^2}{(1-q^2)^2}$, 
we obtain
$A_n = - 2 \tfrac{1+ q^{2n}}{(1-q^{2n})^2}\,.$
\end{proof}

Finally, to get Proposition \ref{calculavecLM}, it remains to prove

\begin{lemma}
For $n\geq 1$,
\begin{align*}
\ncint (bb^*)^n a \, da^* \,\DD^{-1} & =
\tfrac{-2q^{4n+2}-2q^{4n}-2q^{2n+2}+6q^{2n}}{(1-q^{2n})^2(1-q^{2n+2})}\,,\\
\ncint (bb^*)^n a^* \, da \,\DD^{-1} & =
\tfrac{6q^{2n+2}-2q^{2n}-2q^2-2}{(1-q^{2n})^2(1-q^{2n+2})}\,.
\end{align*}
\end{lemma}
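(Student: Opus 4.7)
My approach mirrors the proofs of the earlier parts of Proposition \ref{calculavecLM}. As a preliminary, using $\DD^{-1}=F|\DD|^{-1}$, $da^{*}=F\,\delta(a^{*})$ modulo $OP^{-\infty}$, $F^{2}=1$, and $[F,\A]\subset OP^{-\infty}$, I reduce the targets to $|\DD|^{-1}$-integrals: $\ncint(bb^{*})^{n}a\,da^{*}\DD^{-1} = \ncint(bb^{*})^{n}a\,\delta(a^{*})|\DD|^{-1}=:C_{n}$, and analogously $D_{n}$. A first linear relation between $C_{n}$ and $D_{n}$ then comes from $d(aa^{*})=-d(bb^{*})$ and the already-established $\ncint(bb^{*})^{n}d(bb^{*})\DD^{-1}=0$: integration gives $\ncint(bb^{*})^{n}da\cdot a^{*}\DD^{-1}=-C_{n}$, and pushing the trailing $a^{*}$ through $\DD^{-1}$ via $a^{*}\DD^{-1}=\DD^{-1}a^{*}+\DD^{-1}\,da^{*}\DD^{-1}$, combined with cyclicity, the commutation $a^{*}(bb^{*})^{n}=q^{2n}(bb^{*})^{n}a^{*}$, Lemma \ref{use-l1}(ii), the $\CR$-equivalence $da\,da^{*}\simeq L_{q}+M_{q}-1$ from Lemma \ref{uptoR}, and Lemma \ref{lemmaLqMq}, yields
\[
C_{n}+q^{2n}D_{n}\;=\;\tfrac{4}{1-q^{2n}}-\tfrac{4}{1-q^{2n+2}}.
\]

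To close the $2\times 2$ system I need a second, independent identity. The natural symmetric integration-by-parts attempts $\ncint d(f)\DD^{-1}=0$ with $f$ built symmetrically from $a,a^{*},(bb^{*})^{n}$ (e.g.\ $f=a(bb^{*})^{n}a^{*}$ or $f=a^{*}(bb^{*})^{n}a$) collapse, after Leibniz plus the step-2 toolbox, to the same relation with trivially vanishing middle contribution; hence one of $C_{n},D_{n}$ must be computed directly. I would evaluate $D_{n}$ via Theorem \ref{Theo}: the $\tau_{1}\otimes\tau_{1}$ term vanishes because $(bb^{*})^{n}$ contains $b,b^{*}$ factors annihilated by $\sigma\circ\pi_{\pm}$, so
\[
D_{n}\;=\;2(\tau_{0}\otimes\tau_{0})\bigl(r\bigl(((bb^{*})^{n}a^{*}\delta(a))^{\circ}\bigr)\bigr).
\]
I would decompose $a^{*}\delta(a)=a_{+}^{*}a_{+}-a_{-}^{*}a_{-}+a_{-}^{*}a_{+}-a_{+}^{*}a_{-}$ and use the exact identities $a_{+}^{*}a_{+}=(1-q^{2}L_{q})(1-q^{2}M_{q})$, $a_{-}^{*}a_{-}=q^{2}L_{q}M_{q}$, and $a_{\sigma}^{*}a_{-\sigma}=-q^{2}b_{\sigma}^{*}b_{-\sigma}$ from \eqref{astuce3}--\eqref{astuce4}, together with the evaluation $\tau_{0}(\pi_{\pm}((bb^{*})^{m}))=\tfrac{1}{1-q^{2m}}$ (for $m\geq 1$), to express $D_{n}$ as a sum of explicit $q$-geometric series. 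After simplification this produces $D_{n}=(6q^{2n+2}-2q^{2n}-2q^{2}-2)/[(1-q^{2n})^{2}(1-q^{2n+2})]$; substitution into the step-1 relation then recovers the claimed formula for $C_{n}$ over the common denominator $(1-q^{2n})^{2}(1-q^{2n+2})$.

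The main obstacle is the $\tau_{0}\otimes\tau_{0}$ evaluation in the second step. Beyond the leading degree-$0$ contribution $((bb^{*})^{n})^{(0)}\cdot(a_{+}^{*}a_{+}-a_{-}^{*}a_{-})$, one must track the crossed pieces where the degree-$\pm 2$ parts of $(bb^{*})^{n}$ combine with the degree-$\mp 2$ pieces $a_{\sigma}^{*}a_{-\sigma}=-q^{2}b_{\sigma}^{*}b_{-\sigma}$ of $a^{*}\delta(a)$. Each such contribution is a $\tau_{0}\otimes\tau_{0}$ of a specific monomial in $\pi_{+}(\A)\otimes\pi_{-}(\A)$, requiring an extension of Lemma \ref{tau0ext} to the relevant monomials and careful summation of the resulting $q$-geometric series. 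The accounting is substantial, but uses no tool beyond those already developed in the chapter.
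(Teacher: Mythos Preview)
Your first linear relation $C_{n}+q^{2n}D_{n}=\tfrac{4}{1-q^{2n}}-\tfrac{4}{1-q^{2n+2}}$ is derived exactly as in the paper. The divergence is in how you obtain a second independent equation. The paper does not compute either integral directly; instead it invokes identity \eqref{eq:F2}, namely $a^{*}\,da+q^{2}a\,da^{*}+q^{2}b\,db^{*}+q^{2}b^{*}\,db=(1-q^{2})F$ modulo $OP^{-\infty}$. Multiplying on the left by $(bb^{*})^{n}$, on the right by $\DD^{-1}$, and using $F\DD^{-1}=|\DD|^{-1}$, one gets immediately
\[
q^{2}C_{n}+D_{n}\;=\;-2(1-q^{2})\,\tfrac{1+q^{2n}}{(1-q^{2n})^{2}}\;-\;\tfrac{4q^{2}}{1-q^{2n+2}},
\]
the right-hand side being assembled from the integrals $\ncint(bb^{*})^{n}|\DD|^{-1}$ and $\ncint(bb^{*})^{n}b\,\delta b^{*}\,|\DD|^{-1}$ already computed in Proposition~\ref{calculavecLM}. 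Solving the resulting $2\times 2$ system yields both formulas at once.

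Your direct $\tau_{0}\otimes\tau_{0}$ computation of $D_{n}$ is valid in principle, and your observation that the two natural integration-by-parts identities collapse to the same relation is correct. But as you acknowledge, the bookkeeping is heavy: since $\CR$ is \emph{not} invisible to $\ncint\,\cdot\,|\DD|^{-1}$, you cannot reduce $(bb^{*})^{n}$ to $L_{q}^{n}+M_{q}^{n}$ and must expand the full degree-$0$ part of $(bb^{*})^{n}a^{*}\delta(a)$ in $a_{\pm},b_{\pm}$ and their adjoints, including all crossed degree-$\pm2$ contributions, for arbitrary $n$. The $F$-identity bypasses this entirely and is the trick you are missing.
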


\begin{proof}
First, using the Leibniz rule, (\ref{bcom}) and Lemma \ref{use-l1} we
have (for $n \geq 1$)
$$ \ncint (bb^*)^n a \, da^*\, \DD^{-1} =
- q^{2n} \ncint (bb^*)^n a^* \, da - \ncint (bb^*)^n da \, da^*
\,\DD^{-2}. $$

Further, we use the identity (\ref{eq:F2}):
$$ \ncint (bb^*)^n \left( a^* \, da + q^2 a \, da^*
   + q^2 b\, db^* + q^2 b^* \, db \right)\, \DD^{-1}
   = (1-q^2) \ncint (bb^*)^n \,|\DD|^{-1}.$$
taking into account that $F\, \DD = |\DD|$.

These equations give together a system of linear equations
\begin{align*}
\ncint (bb^*)^n a \, da^* \,\DD^{-1} + q^{2n} \ncint (bb^*)^n a^* \,
da \,\DD^{-1}
&= -4 \left( \tfrac{1}{1-q^{2n+2}} - \tfrac{1}{1-q^{2n}} \right), \\
q^2 \ncint (bb^*)^n a \, da^* \, \DD^{-1} + \ncint (bb^*)^n a^* \, da
\,\DD^{-1}
&= -2 (1-q^2) \tfrac{1+q^{2n}}{(1-q^{2n})^2} - \tfrac{4
q^2}{1-q^{2n+2}}
\end{align*}
which is solved by the expressions stated in the lemma.
\end{proof}

\subsubsection{The noncommutative integrals at $|\DD|^{-2}$}

We need to separate this task into two problems. First, we shall calculate all integrals of the type $\ncint x\, \delta(z)|\DD|^{-2}$,
with $x \in \A(SU_q(2))$ and $z$ being one of the generators. The
second
problem is to calculate  $\ncint x \,\delta(y) \, \delta(z)\,
|\DD|^{-2}$, with
both $y$ and $z$ being the generators $\{a,a^*,b,b^*\}$.

\begin{lemma} The only a priori non-vanishing integrals of the
type $\ncint x \, \delta(z)  \,|\DD|^{-2}$ are for $n\in \N$:
\begin{align*}
&\ncint (bb^*)^n b^*\delta(b) \,|\DD|^{-2} =\ncint (bb^*)^n
b\delta(b^*) \,|\DD|^{-2}=0 ,\\
& \ncint (bb^*)^n a\delta(a^*) \,|\DD|^{-2} =
\tfrac{4q^{2n}(1-q^2)}{(q^{2n+2}-1)(1-q^{2n})}\, , \quad n>0\\
&\ncint (bb^*)^n a^*\delta(a) \,|\DD|^{-2} =
\tfrac{4(1-q^2)}{(1-q^{2n+2})(1-q^{2n})} \,.
\end{align*}
\end{lemma}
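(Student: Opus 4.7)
The plan is to combine the balance criterion of Lemma~\ref{ncintLin}(ii) with the reduction modulo the ideal $\CR$ from Lemma~\ref{uptoR}, and then integrate using Lemma~\ref{lemmaLqMq}. This decomposes the problem into a finiteness step (which $(x,z)$ give non-trivial contributions) and a computational step.

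First I would establish the list of potentially non-zero cases via balance. Expanding $x = \sum_\alpha x_\alpha m^\alpha$ in the PBW basis gives $x\,\delta(z) = \sum_\alpha x_\alpha M^\alpha_\beta$ where $\beta \in \{(\pm 1,0,0),(0,1,0),(0,0,1)\}$ records the choice of $z \in \{a, a^*, b, b^*\}$. By Lemma~\ref{ncintLin}(ii), only terms with $\alpha_1 = -\beta_1$ and $\alpha_2+\beta_2 = \alpha_3+\beta_3$ contribute, which forces $x$ to lie respectively in the linear spans of $\{a^*(bb^*)^n\}$, $\{a(bb^*)^n\}$, $\{(bb^*)^n b^*\}$, $\{(bb^*)^n b\}$; all other monomials yield zero integral, which justifies the words \emph{a priori non-vanishing} in the statement. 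Moreover the coefficient $w^\alpha_\beta$ carries an overall factor of $\beta_1$, which immediately gives the first assertion $\ncint (bb^*)^n b^*\delta(b)|\DD|^{-2} = \ncint (bb^*)^n b\delta(b^*)|\DD|^{-2} = 0$ without any further computation.

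For the remaining two integrals I would work modulo $\CR$, which by construction annihilates any integrand paired with $|\DD|^{-2}$. Since $L_q M_q \in \CR$ (Lemma~\ref{uptoR}) and $L_q$, $M_q$ commute, the binomial expansion collapses to $(bb^*)^n \simeq (L_q+M_q)^n \simeq L_q^n + M_q^n \pmod{\CR}$ for $n \geq 1$. Combined with $a\,\delta(a^*) \simeq L_q + M_q - 1$ and $a^*\delta(a) \simeq 1 - q^2(L_q+M_q)$ from Lemma~\ref{uptoR}, the integrands reduce modulo $\CR$ to polynomials in $L_q$ and $M_q$ only, which are then integrated termwise by Lemma~\ref{lemmaLqMq}: $\ncint L_q^k|\DD|^{-2} = \ncint M_q^k|\DD|^{-2} = \tfrac{2}{1-q^{2k}}$. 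An elementary simplification of $\tfrac{4}{1-q^{2n+2}} - \tfrac{4}{1-q^{2n}}$ and of $\tfrac{4}{1-q^{2n}} - \tfrac{4q^2}{1-q^{2n+2}}$ then produces the stated closed forms.

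The main difficulty is purely one of bookkeeping: one must verify that every cross-term $L_q^i M_q^j$ with $i,j \geq 1$ appearing in $(bb^*)^n$ (or in $(bb^*)^n$ multiplied by the various $\delta$-forms) lies in $\CR$, so that the expansion genuinely telescopes to $L_q^n + M_q^n$ modulo the ideal. This is immediate from $L_q M_q \in \CR$ and the ideal property of $\CR$ in $\Psi^0(\A)$, so the argument poses no real conceptual obstacle beyond a careful expansion. The only mild subtlety is that, since $\N = \{1,2,\ldots\}$, the statement implicitly requires $n \geq 1$; the edge case $n=0$ of $\ncint a\,\delta(a^*)|\DD|^{-2}$ and $\ncint a^*\delta(a)|\DD|^{-2}$ is not covered by the displayed formulas but follows by the same computation, using additionally $\ncint |\DD|^{-2} = 0$ from Lemma~\ref{ncintD}.
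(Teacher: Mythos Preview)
Your proposal is correct and follows essentially the same route as the paper: reduce modulo $\CR$ via Lemma~\ref{uptoR} to polynomials in $L_q$ and $M_q$, then integrate using Lemma~\ref{lemmaLqMq}. The one small variation is that for the $b^*\delta(b)$ and $b\delta(b^*)$ cases you invoke the factor $\beta_1$ in $w^\alpha_\beta$ from Lemma~\ref{ncintLin}(ii), whereas the paper's ``similarly'' presumably means computing $(bb^*)^n b^*\delta(b)\simeq L_q^{n+1}-M_q^{n+1}$ and using that $\ncint L_q^{n+1}|\DD|^{-2}=\ncint M_q^{n+1}|\DD|^{-2}$; both arguments are immediate.
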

\begin{proof}
Since $a\delta(a^*)\simeq L_q+M_q-1$ and $(bb^*)^n\simeq
L_q^n+M_q^n$, we get 
$$
(bb^*)^n a\delta(a^*) \simeq L_q^{n+1}+M_q^{n+1}-L_q^n-M_q^n
$$
and the second result is obtained from Lemma \ref{lemmaLqMq}. The
other integrals are computed in a similar way.
\end{proof}

\begin{lemma}
The only a priori non-vanishing integrals of the
type $\ncint x \, dy \, dz \,|\DD|^{-2}$ are for $n\in \N$:
\begin{align*}
&\ncint (bb^*)^{n} \, (b^*)^2 db\, db \,|\DD|^{-2} =
\tfrac{4}{1-q^{2n+4}}, \\
&\ncint (bb^*)^{n} \, db\, db^*  \, |\DD|^{-2} =
\tfrac{4}{1-q^{2n+2}}, \\
&\ncint (bb^*)^{n} \, (a^* b^*) (da \,db)\,|\DD|^{-2} = 0, \\
&\ncint (bb^*)^{n} \, (a b^*) (da^* \,db)\,|\DD|^{-2} = 0, \\
&\ncint (bb^*)^{n} \, (a^* b) (da \,db^*)\,|\DD|^{-2} = 0, \\
&\ncint (bb^*)^{n} \, (a b) (da^* \,db^*)\,|\DD|^{-2} = 0,\\
&\ncint (bb^*)^{n} \, (da \,da^*)\,|\DD|^{-2}
=\tfrac{4(q^{2n+2}-q^{2n})}{(1-q^{2n+2})(1-q^{2n})}\,, \quad n>0\\
&\ncint (bb^*)^{n} \, (da^* \,da) \,|\DD|^{-2} 
=\tfrac{4(q^2-1)}{(1-q^{2n+2})(1-q^{2n})}\,.
\end{align*}
\end{lemma}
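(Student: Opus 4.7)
The plan is to reduce each integrand modulo the ideal $\CR$ to a polynomial expression in the diagonal operators $L_q$ and $M_q$, and then evaluate via Lemma~\ref{lemmaLqMq}. The key inputs are: the modular equivalences collected in Lemma~\ref{uptoR}, namely $bb^* \simeq L_q + M_q$, $b^{m-2}(b^*)^m\, db\, db \simeq L_q^m + M_q^m$, $b^{m-1}(b^*)^{m-1}\, db\, db^* \simeq -L_q^m - M_q^m$, $b^m(b^*)^{m-2}\, db^*\, db^* \simeq L_q^m + M_q^m$, $da\, da^* \simeq L_q + M_q - 1$, and $da^*\, da \simeq q^2(L_q + M_q) - 1$; the absorption $L_q M_q \in \CR$ (which makes $(L_q+M_q)^n \simeq L_q^n + M_q^n$ modulo $\CR$); the fact that $\ncint T |\DD|^{-2} = 0$ for $T \in \CR$; and Lemma~\ref{lemmaLqMq} giving $\ncint L_q^n |\DD|^{-2} = \ncint M_q^n |\DD|^{-2} = \tfrac{2}{1-q^{2n}}$.

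For the three nonvanishing $b$-only identities, the reduction is immediate: after an index shift in Lemma~\ref{uptoR}, one gets $(bb^*)^n (b^*)^2\, db\, db = b^n (b^*)^{n+2}\, db\, db \simeq L_q^{n+2} + M_q^{n+2}$, which integrates to $\tfrac{4}{1-q^{2n+4}}$ by Lemma~\ref{lemmaLqMq}; the $db\, db^*$ and $db^*\, db^*$ cases are handled by the analogous shifts $m=n+1$ and $m=n+2$ respectively. For the two mixed-$a$ identities, multiplying $(bb^*)^n \simeq L_q^n + M_q^n$ by $da\, da^* \simeq L_q + M_q - 1$ and absorbing all cross-products $L_q^i M_q^j$ ($i,j\ge1$) into $\CR$ yields $(bb^*)^n\, da\, da^* \simeq L_q^{n+1} + M_q^{n+1} - L_q^n - M_q^n$, which integrates to $\tfrac{2}{1-q^{2n+2}} + \tfrac{2}{1-q^{2n+2}} - \tfrac{2}{1-q^{2n}} - \tfrac{2}{1-q^{2n}} = \tfrac{4(q^{2n+2}-q^{2n})}{(1-q^{2n+2})(1-q^{2n})}$; the $da^*\, da$ case is identical modulo a factor of $q^2$.

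For the four vanishing mixed integrals, the strategy is to reduce the integrand using $a \simeq a_+$, $a^* \simeq a_+^*$, $da \simeq a_+ F$, $da^* \simeq -a_+^* F$ modulo $\CR$ (since $a_-, a_-^* \in \CR$), retain both summands of each of $b$, $b^*$, $db$, $db^*$, and absorb into $\CR$ every product of the form $b_\pm b_\mp^{(*)}$ that appears during the expansion. One verifies case by case that the resulting surviving monomials either lie in $\CR$ or, once multiplied by $(bb^*)^n \simeq L_q^n + M_q^n$ and expanded in the Poincar\'e--Birkhoff--Witt basis, fail the balance conditions $\a_1 + \beta_1 = 0$ and $\a_2 + \beta_2 = \a_3 + \beta_3$ of Lemmas~\ref{ncintLin} and \ref{ncintA2}, whence the $r$-image of the degree-$0$ part vanishes under $(\tau_0 \otimes \tau_1 + \tau_1 \otimes \tau_0)$ and Theorem~\ref{Theo} delivers the desired zero.

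The principal technical obstacle is precisely the bookkeeping in the four vanishing cases: one must systematically track the $\pm$-signs, the ordering of the $F$-factors in $da, da^*, db, db^*$, and the successive $L_q M_q$-absorptions into $\CR$. Unlike the clean $b$-only and $a$-only computations, which reduce to a single application of Lemma~\ref{uptoR} followed by Lemma~\ref{lemmaLqMq}, the mixed case requires a careful case analysis to confirm that no balanced degree-zero monomial survives the reduction; the quickest route is to expand each integrand in the PBW basis, apply the balance conditions from Lemmas~\ref{ncintLin} and \ref{ncintA2}, and verify the vanishing directly.
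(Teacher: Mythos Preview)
For the nonvanishing identities your plan is correct and is exactly the paper's: reduce modulo $\CR$ via Lemma~\ref{uptoR} to polynomials in $L_q,M_q$, absorb cross-terms $L_q^iM_q^j\in\CR$, and evaluate with Lemma~\ref{lemmaLqMq}.

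For the four vanishing mixed integrals, however, your proposed mechanism does not work. The integrands $(bb^*)^n(a^*b^*)\,da\,db$, etc., are \emph{balanced} --- net $a$-degree zero and equal $b$/$b^*$ counts --- so the balance criteria of Lemmas~\ref{ncintLin} and \ref{ncintA2} do not eliminate them; and those lemmas apply to products of canonical $\delta$-one-forms $M^\a_\b$, not to arbitrary degree-zero words in $a_\pm,b_\pm$. If you actually compute the degree-zero part of, say, $a_+^*(b_+^*+b_-^*)a_+(b_+-b_-)$ and push it through $r$, you find that one tensor leg of each surviving term has $\tau_1\neq 0$, so Theorem~\ref{Theo} does not give zero term by term.

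What actually makes these four integrals vanish is the same $L_q,M_q$ reduction used elsewhere. Using the $\delta$-version of the commutation rules (the Remark after Lemma~\ref{table:commrule}), for instance $b^*\delta(a)\simeq q\,\delta(a)\,b^*$, one factors
\[
(a^*b^*)\,\delta(a)\,\delta(b)\ \simeq\ q\,(a^*\delta a)(b^*\delta b)\ \simeq\ q\bigl(1-q^2(L_q+M_q)\bigr)(L_q-M_q),
\]
and after multiplying by $(bb^*)^n\simeq L_q^n+M_q^n$ and discarding cross-terms one is left with a linear combination of differences $L_q^k-M_q^k$. Since $\ncint L_q^k|\DD|^{-2}=\ncint M_q^k|\DD|^{-2}$ by Lemma~\ref{lemmaLqMq}, each such difference integrates to zero. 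The other three mixed cases go through identically (use $b\,\delta a^*\simeq q^{-1}\delta a^*\,b$, $b\,\delta a\simeq q\,\delta a\,b$, etc.). So the vanishing is a cancellation coming from the $L_q\leftrightarrow M_q$ symmetry of Lemma~\ref{lemmaLqMq}, not from an imbalance; this is what the paper's one-line proof intends.
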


\begin{proof}
This follows from Lemma \ref{lemmaLqMq} with the equivalences up to
$\CR$ gathered in Lemma \ref{uptoR}.
\end{proof}

\section{Examples of spectral action}

It is clear from Theorem \ref{mainThmJ} that any one-form of
the form $ada$, $bdb$, $adb$, $a^*db$, etc... do not contribute to
the spectral action. Indeed, only
the balanced parts of one-forms give a possibly nonzero term
in the coefficients.
We give in the following table the values of the terms $\ncint A^n\,\vert \DD
\vert^{-p}$ and the full $\zeta_{\DD_{\Abb}}(0)$ for a few examples.

\begin{table}[ht]
\caption{Values of noncommutative integrals}
\centering 
\begin{tabular}{c c c c c c c c} 
\hline\hline 
\\
$\Abb$ & $\ncint A\,\vert \DD \vert^{-3}$  & $\ncint A^2\,\vert \DD
\vert^{-3}$ & $\ncint A^3\,\vert \DD \vert^{-3}$ & $\ncint A\,\vert
\DD \vert^{-2}$ & $\ncint A^2\,\vert \DD \vert^{-2}$
& $\ncint A\,\vert \DD \vert^{-1}$ & $\zeta_{\DD_{\Abb}}(0)$ \\
[0.5ex] 
\hline 
\\
$a^*da$ & 2 & 2 & 2 & $\tfrac{4q^2}{q^2-1}$ &
$\tfrac{4q^2(q^2+2)}{q^4-1}$ & $\tfrac{3q^2+1}{2(q^2-1)}$ &
$\tfrac{11q^4+36q^2+13}{3(q^4-1)}$ \\
$b^*db$ & 0 & 0 & 0 & 0 & $\tfrac{-4}{q^4-1}$ & $\tfrac{-2}{q^2-1}$ &
$\tfrac{4q^2}{q^4-1}$\\
$ada^*$ & $-2$ & $2$ & $-2$ & $\tfrac{-4}{q^2-1}$ &
$\tfrac{4(2q^2+1)}{q^4-1}$ & $\tfrac{q^2+3}{2(q^2-1)}$ &
$\tfrac{13q^4+36q^2+11}{3(q^4-1)}$  \\
$bdb^*$ & 0 & 0 & 0 & 0 & $\tfrac{-4}{q^4-1}$ & $\tfrac{-2}{q^2-1}$ &
$\tfrac{4q^2}{q^4-1}$\\
\\
\hline 
\end{tabular}
\label{table:nonlin}
\end{table}

\noindent 1) Clearly the spectral action depends on $q$: for instance,
\begin{align*}
&\SS(\DD_{a^*da},\Phi,\Lambda) \, = \,2\,\Phi_{3}\,\Lambda^{3}
-8\,\Phi_{2}\,\Lambda^{2}+\,\tfrac{q^2+15}{2(1-q^2)}\,\Phi_{1}\,\Lambda^{1}+\,
  \tfrac{11q^4+36q^2 +13}{3(q^4-1)}\,\Phi(0).
\end{align*}
2) Moreover, for $B:=a\,\delta a^*$ and $A:=B+B^*$, we get since $B
\simeq B^*\mod \CR$,
\begin{align}
     \label{symmetrization}
\ncint A^p |\DD|^{-k} = 2^{p} \ncint B^p |\DD|^{-k}, \quad 1\leq
p\leq k\leq 3\, .
\end{align}
Thus the spectral action of the selfadjoint one-form
$\mathbb{A}:=ada^*+(ada^*)^*$ is
\begin{align*}
\SS(\DD_{\mathbb{A}},\Phi,\Lambda) \, = \,2\,\Phi_{3}\,\Lambda^{3}+
16\,\Phi_2\,\Lambda^2+ \tfrac{q^2-33}{2(1-q^2)}\,\Phi_1\,\Lambda^1+ 
  \tfrac{122q^4+168 q^2-2}{3(q^4-1)}\,\Phi(0) .
\end{align*}
3) When $B_n:=(bb^*)^n \,b\, \delta b^*$, then by Lemma
\eqref{uptoR}, $B_n \simeq B_n^*$, so for $A_n:=B_n+B_n^*$, the
equation \eqref{symmetrization} is still valid and $\ncint B_n^p
\,|\DD|^{-k}$ are all zero but $\ncint B_n
\,|\DD|^{-1}=\tfrac{2}{1-q^{2n+2}}$ and $\ncint B_n^2
\,|\DD|^{-2}=\tfrac{4}{1-q^{4n+4}}$, so
\begin{align}
   \label{strange}
\SS(\DD_{\mathbb{A}_n},\Phi,\Lambda) \, =
\,2\,\Phi_{3}\,\Lambda^{3}-\half\, \Phi_1 \Lambda^1 +
\tfrac{8}{1+q^{2n+2}} \,\Phi(0).
\end{align}
Remark that this spectral action still exists as $q \to 1$!

Note however that the symmetrization process \eqref{symmetrization}
does not work in general, for instance if $B:=a \, \delta b$ and
$A:=B+B^*$, then $\ncint A^2
|\DD|^{-1}=\tfrac{8(q^4-q^2-1)}{(1-q^4)^2}$ while $\ncint B^2
|\DD|^{-1}=0$ or $\ncint [B,\,B^*] |\DD|^{-1}= \tfrac{4}{1-q^4}\,$.

\noindent 4) The spectral action can be also independent of $q$: for instance,
if $\mathbb{A} = \tfrac{1}{1-q^2}\,\xi(\DD)$ is the $q$-dependent
selfadjoint one-form given in \eqref{eq:F3}, then,
$$
\SS(\DD_{\mathbb{A}},\Phi,\Lambda) \, = \,2\,\Phi_{3}\,\Lambda^{3}
-8\,\Phi_{2}\,\Lambda^{2}+\tfrac{15}{2}\, \Phi_1\,
\Lambda^1-\tfrac{13}{3}.
$$

\section{The commutative sphere $\mathbb{S}^3$}

Since $ SU(2) \simeq \mathbb{S}^3$, we get a concrete spinorial
representation of the algebra $\A:=C^{\infty}(\mathbb{S}^3)$ on the
same Hilbert space $\H$ and same Dirac operator $\DD$ with
\eqref{eq:reprexact} where $q=1$ which means that $q$-numbers are
trivial: $[\alpha]=\alpha$. So
\begin{align}
  \label{eq:reprexacq=1t}
\pi(a)\, \kett{j\mu n } &:= \alpha^+_{j\mu n}\,\kett{j^+ \mu^+ n^+}
+\alpha^-_{j\mu n}\,\kett{j^- \mu^+ n^+},\nonumber \\
\pi(b)\, \kett{j\mu n } &:= \beta^+_{j\mu n}\,\kett{j^+ \mu^+ n^-}
+\beta^-_{j\mu n}\,\kett{j^- \mu^+ n^-},\nonumber\\
\pi(a^*)\, \kett{j\mu n } &:= \tilde{\alpha}^+_{j\mu n}\,\kett{j^+
\mu^- n^-}+\tilde{\alpha}^-_{j\mu n}\,\kett{j^- \mu^- n^-},\nonumber
\\
\pi(b^*)\, \kett{j\mu n } &:= \tilde{\beta}^+_{j\mu n}\,\kett{j^+
\mu^- n^+}+\tilde{\beta}^-_{j\mu n}\,\kett{j^- \mu^- n^+}
\end{align}
where
\begin{align*}
\alpha^+_{j\mu n}&:=\sqrt{ j+\mu+1 }
\left( \begin{array}{cc}
\tfrac{ \sqrt{j+n+3/2}}{2j+2} &   0\\
\tfrac{ \sqrt{j-n+1/2}}{(2j+1)(2j+2} & \tfrac{\sqrt{j+n+1/2]}}{2j+1}  
\end{array} \right),\\
\alpha^-_{j\mu n}&:=\sqrt{ j-\mu }
\left( \begin{array}{cc}
\tfrac{ \sqrt{j-n+1/2}}{2j+1} &  -\tfrac{\sqrt{j+n+1/2}}{2j(2j+1)}\\
0 & \tfrac{\sqrt{j-n-1/2}}{2j}  \\
\end{array} \right),\\
\beta^+_{j\mu n}&:=\sqrt{j+\mu+1 }
\left( \begin{array}{cc}
\tfrac{ \sqrt{j-n+3/2}}{2j+2} &   0\\
-\tfrac{ \sqrt{j+n+1/2}}{(2j+1)(2j+2)} & \tfrac{\sqrt{j-n+1/2}}{2j+1}
\end{array} \right),\\
\beta^-_{j\mu n}&:=\sqrt{ j-\mu }
\left( \begin{array}{cc}
-\tfrac{\sqrt{j+n+1/2}}{2j+1} &  - \tfrac{ \sqrt{j-n+1/2}}{2j(2j+1} \\
0 & -\tfrac{\sqrt{j+n-1/2}}{2j}  
\end{array} \right)
\end{align*}
with $\tilde{\alpha}^{\pm}_{j\mu n}:=(\alpha^{\mp}_{j^{\pm}\mu^-
n^-})^*$, 
$\tilde{\beta}^{\pm}_{j\mu n}:=(\beta^{\mp}_{j^{\pm}\mu^- n^+})^*$.

Note that the representation on the vectors $v^j_{m,l}$ is not as
convenient as in \eqref{eq:rpnappr}.

One can check that the generators $\pi(a)$, $\pi(b)$ and their
adjoint commute and that $[x,[\DD,y]]=0$ for any $x,\,y \in \A$.

\subsection{Translation of Dirac operator}

In general the Dirac operator is defined in a more symmetric way than
what we did. So, although not absolutely necessary here,  we define
for the interested reader the unbounded self-adjoint translated
operator $\DD'$ on $\H$ by the constant $\lambda$ as 
$$
\DD':= \DD+\lambda.
$$
For instance, this gives for $\lambda=-\half$ in the case of
$\mathbb{S}^3$, see \cite{Homma}, 
$\DD'\, v^j_{m,l} = (2j+1) \genfrac{(}{)}{0pt}{1}{1 \quad
0}{ \,0 \,\,\, -1}  v^j_{m,l}$ so $v^j_{m,l}$ is an eigenvector of
$\vert \DD' \vert$. We define $D:=\DD + P_0$ and $D':=\DD' + P'_0$ where $P_0$ is the projection on $\Ker \DD$ and $P_0'$ is the projection on $\Ker \DD'$.

As the following lemma shows, the computation of the noncommutative
integrals involving $\DD$ can be reduced to the computation of
certain noncommutative integrals involving $\DD'$:

\begin{lemma}
   \label{translation}
If $\ncint' T := \Res_{s=0} \Tr \big(T |D'|^{-s}\big)$, then for
any 1-form $\Abb$ on a spectral triple of dimension $n$, 
\begin{align*}
&\ncint A \,|D|^{-(n-2)} =   \ncint' A\,|D'|^{-(n-2)} +\lambda\,
(n-2) \ncint' A \,\DD'|D'|^{-n} + \lambda^2\, \tfrac{(n-1)(n-2)}{2}
\ncint'A\, |D'|^{-n},\\
&\ncint A \,D^{-(n-2)} =   \ncint' A\,D'^{-(n-2)} +\lambda\,
(n-2) \ncint' A \,D'^{-(n-1)} + \lambda^2\, \tfrac{(n-1)(n-2)}{2}
\ncint'A\, D'^{-n}.
\end{align*}
\end{lemma}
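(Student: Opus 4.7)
The key observation is that $\DD = \DD' - \lambda$, so passing to the invertible versions and using that the kernel projections $P_0$, $P'_0$ are finite-rank (hence in $OP^{-\infty}$), one gets
$$D = D' - \lambda \quad\text{and}\quad D^2 = D'^2 - 2\lambda D' + \lambda^2$$
modulo $OP^{-\infty}$. Since smoothing perturbations do not affect residues, I would work with these equalities throughout. The plan is to expand the relevant powers of $D$ (or $|D|$) in powers of $\lambda$ up to a remainder in $OP^{-(n+1)}$; once multiplied by the bounded one-form $A\in OP^0$, that remainder is trace-class near $s=0$ and contributes nothing to the residues.

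For the second formula I would factor
$$D^{-(n-2)} = D'^{-(n-2)}\bigl(1 - \lambda D'^{-1}\bigr)^{-(n-2)},$$
expand the scalar binomial series (legal since $\lambda$ commutes with $D'^{-1}$), and keep terms through $\lambda^2$, obtaining
$$D^{-(n-2)} = D'^{-(n-2)} + (n-2)\lambda\,D'^{-(n-1)} + \tfrac{(n-1)(n-2)}{2}\lambda^2\,D'^{-n} + R,$$
with $R\in OP^{-(n+1)}$. Taking $\Res_{s=0}\Tr(A\,\cdot\,|D'|^{-s})$ on both sides yields the second identity. The switch from the original weight $|D|^{-s}$ (which defines $\ncint$) to $|D'|^{-s}$ (which defines $\ncint'$) on pseudodifferential operators is harmless because the binomial expansion of $|D|^{-s}$ in terms of $|D'|^{-s}$ produces correction terms with explicit factors of $s$, which kill the simple poles.

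For the first formula I would apply the same strategy directly to the $s$-dependent weight. Using the second displayed equality above, I write
$$|D|^{-(n-2)-s} = |D'|^{-(n-2)-s}\bigl(1 - 2\lambda D'^{-1} + \lambda^2 D'^{-2}\bigr)^{-(n-2+s)/2}$$
and expand the binomial to order $\lambda^2$ with exponent $\beta = -(n-2+s)/2$. The $\lambda$-coefficient is $-2\beta = n-2+s$ times $D'^{-1}=D'|D'|^{-2}$, and the $\lambda^2$-coefficient is $\beta(2\beta-1) = (n-2+s)(n-1+s)/2$ times $D'^{-2}=|D'|^{-2}$. This gives
$$|D|^{-(n-2)-s} = |D'|^{-(n-2)-s} + (n-2+s)\lambda\,D'|D'|^{-n-s} + \tfrac{(n-2+s)(n-1+s)}{2}\lambda^2\,|D'|^{-n-s} + O(\lambda^3),$$
with remainder in $OP^{-(n+1)-\Re(s)}$. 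Applying $\Res_{s=0}\Tr(A\,\cdot)$, the polynomial-in-$s$ coefficients contribute only their values at $s=0$ (simple-pole hypothesis), and $D' = \DD'$ modulo $OP^{-\infty}$ promotes $D'|D'|^{-n}$ to $\DD'|D'|^{-n}$ inside $\ncint'$, producing the stated formula. The only non-routine point is verifying that the binomial expansions land exactly on the coefficients $(n-2)$ and $(n-1)(n-2)/2$, and that all smoothing discrepancies and $O(\lambda^3)$ tails really are invisible to the residue — both reducing to the $OP^{-(n+1)}$ estimate for the remainders.
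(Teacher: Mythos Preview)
Your argument is correct and lands on the same expansion of $|D|^{-(n-2)}$ modulo $OP^{-(n+1)}$ that the paper uses, but you get there by a more direct route. The paper invokes its general perturbation machinery: it first quotes Proposition~\ref{ncintfluctuated} to justify replacing the weight $|D|^{-s}$ by $|D'|^{-s}$, and then applies Lemma~\ref{2dev} (the expansion $|D|^{-s}=|D'|^{-s}+\sum_p K_{p,s}Y^p|D'|^{-s}$ with $Y=\log D^2-\log D'^2$) at the fixed value $s=n-2$, observing that because $\lambda$ is scalar, $Y$ commutes with $|D'|$ and the coefficients collapse to $K_{p,s}=(-s/2)^p/p!$. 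You instead exploit the commutativity of $\lambda$ from the start, treating everything as scalar functional calculus in $D'$ and expanding with the ordinary binomial series; in particular, for the first formula you keep $s$ variable in $|D|^{-(n-2)-s}$ and never need a separate weight-switching step. Your approach is shorter and more transparent for this constant shift, while the paper's has the virtue of exhibiting the lemma as a special case of the framework already built for one-form fluctuations. The coefficients $(n-2)$ and $\tfrac{(n-1)(n-2)}{2}$ and the handling of the $OP^{-(n+1)}$ remainders coincide in both.
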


\begin{proof}
Recall from Proposition \ref{ncintfluctuated} that for any
pseudodifferential operator $P$,
$$
\ncint P |D|^{-r} = \Res_{s=0} \Tr \big(P|D|^{-r}|D'|^{-s}
\big).
$$
Moreover by Lemma \ref{2dev}, for any $s\in \C$ and $N\in \N^*$
\begin{equation}
      \label{D-s}
|D|^{-s}= |D'|^{-s} + \sum_{p=1}^N K_{p,s} \,Y^p\, |D'|^{-s}
\mod OP^{-N-1-\Re(s)}
\end{equation}
where 
$Y= \sum_{k=1}^{N}\tfrac{(-1)^{k+1}}{k} (-2\la D'+\lambda^2)^k
D'^{-2k} \mod OP^{-N-1}$ 
and $K_{p,s}$ are complex numbers that can be explicitly computed.
Precisely, we find
$K_{p,s}= (-\tfrac{s}{2})^p \,V(p)$
where $V(p)$ is the volume of the $p$-simplex. Since the spectral
dimension is $n$, we work modulo $OP^{-(n+1)}$, and for $s=n-2$, we get
from (\ref{D-s}): 
$|D|^{-(n-2)}= |D'|^{-(n-2)} +\lambda (n-2)\DD'|D'|^{-n} +
\lambda^2 \tfrac{(n-1)(n-2)}{2}|D'|^{-n}  \mod OP^{-(n+1)}$.

As a consequence, we have for $P\in OP^0$ (the $OP^0$ spaces are the
same for  $\DD$ or $\DD'$),
\begin{align*}
\ncint P |D|^{-(n-2)} = \ncint' P|D'|^{-(n-2)} +\lambda (n-2)
\ncint' P \DD'|D'|^{-n} + \lambda^2 \tfrac{(n-1)(n-2)}{2} \ncint'P
|D'|^{-n}.
\end{align*}
Since $A$ and $AF$ are in $OP^{0}$, we get both formulae.
\end{proof}

\subsection{Tadpole and spectral action on $\mathbb{S}^3$} 
We consider now the commutative spectral triple
$(C^{\infty}(\mathbb{S}^3),\,\H,\, \DD)$. It is 1-summable since 
$\langle \! \langle j\mu n\,s\,|\, [F,\pi(x)] \,\kett{j\mu n\,s}=0$
when $x=a,a^*,b,b^*$ for any $j,\,\mu,\,n,\,s=\up,\dn$.

All integrals of above lemma are zero for $\mathbb{S}^3$:
\begin{prop}
  \label{notadpole}
There is no tadpole of any order on the commutative real spectral
triple $(C^{\infty}(\mathbb{S}^3),\,\H,\, \DD)$. More generally, for any one-form $\Abb$, $\ncint \Abb F |\DD|^{-p}=0$ for
$p\in \N$.
\end{prop}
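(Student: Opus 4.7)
The second (stronger) claim implies the first: by Proposition~\ref{valeurtadpole} in dimension $d=3$, every tadpole is a constant multiple of some $\ncint A\,\DD\,|\DD|^{-m-2}=\ncint A\,F\,|\DD|^{-m-1}$ (or, for the scale-invariant part, of $-\ncint A\DD^{-1}=-\ncint A\,F\,|\DD|^{-1}$). So the plan is to concentrate on showing $\ncint \Abb\,F\,|\DD|^{-p}=0$ for all $p\in\N$ and every one-form $\Abb$.

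First I would use $F=\DD|\DD|^{-1}$ (the Dirac of the round $\mathbb{S}^3$ is invertible, its eigenvalues being $\pm(2j+\tfrac12)$ or $2j+\tfrac32$) to rewrite $\ncint \Abb\,F\,|\DD|^{-p}=\ncint \Abb\,\DD\,|\DD|^{-p-1}$. For $p\geq 4$ the integrand is in $OP^{-4}$, hence trace class on this $3$-dimensional triple, so its $s=0$ residue is zero. Only $p\in\{1,2,3\}$ require work. For these I would invoke the standard identification of $\ncint$ with the Wodzicki residue for any classical $\Psi$DO on the commutative $d=3$ spin triple of $\mathbb{S}^3$:
\[
\ncint B\;=\;\tfrac{1}{(2\pi)^{3}}\int_{S^{*}\mathbb{S}^{3}}\mathrm{tr}\,\sigma_{-3}(B)(x,\xi)\,d\xi\,dx,
\]
where $\sigma_{-3}(B)$ is the $(-3)$-homogeneous component of the complete symbol and $\mathrm{tr}$ denotes the fiberwise trace on the spinor bundle. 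Writing $\Abb=\sum_i x_i[\DD,y_i]$, the complete symbol of $\Abb$ is the $\xi$-independent Clifford element $ic(\alpha(x))$ with $\alpha:=\sum_i x_i\,dy_i\in\Omega^{1}(\mathbb{S}^{3})$. Since $\DD^2=\nabla^{*}\nabla+\tfrac{R}{4}$ by Lichnerowicz is Laplace-type with scalar principal symbol $|\xi|^{2}$, Seeley's construction yields a complete symbol expansion of $|\DD|^{-p-1}$ whose terms are each of \emph{even} Clifford degree in $\xi$. Composing $\Abb\cdot\DD\cdot|\DD|^{-p-1}$ via the symbol-composition formula, the complete symbol of the product has \emph{odd} total Clifford degree throughout its expansion, and after fiberwise spinor trace each $(-3)$-homogeneous piece reduces to a scalar \emph{odd} function of $\xi$. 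Integration over the unit cosphere $S^{2}\subset T^{*}_{x}\mathbb{S}^{3}$ then kills every such contribution by $\xi\mapsto-\xi$, and integration over $x\in\mathbb{S}^{3}$ gives the claimed vanishing.

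The main obstacle is controlling the curvature corrections to the subleading symbols of $|\DD|^{-p-1}$: on flat space the even-Clifford-degree structure is immediate, but on the curved $\mathbb{S}^{3}$ one must verify that it persists through the whole Seeley expansion. The key input is precisely Lichnerowicz's formula, which makes $\DD^{2}$ a genuine scalar Laplace-type operator (up to the identity on spinors) and so ensures the even/odd Clifford grading is preserved by all complex powers. An alternative, more elementary but computationally heavier, route would expand $\Abb$ in the Peter--Weyl basis $\{v^{j}_{m,l}\}$ used earlier in the chapter and pair off matrix elements via the exchange $v^{j,\up}_{m,l}\leftrightarrow v^{j,\dn}_{m,l}$ combined with the commutativity $[x,y]=0$ for $x,y\in C^{\infty}(\mathbb{S}^{3})$ that is special to the $q=1$ limit; this bypasses the symbol machinery but requires a direct case analysis of the four possible generator choices $z\in\{a,a^{*},b,b^{*}\}$ in $\Abb$.
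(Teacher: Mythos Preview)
Your symbol-calculus route has a genuine gap. First, the Clifford-parity count is off: $\Abb$ carries one $\gamma$ (from $c(db)$), $\DD$ has odd Clifford degree throughout its expansion, and $|\DD|^{-p-1}$ has even Clifford degree; the composition therefore has \emph{even} Clifford parity, not odd. More seriously, even with the parity corrected, the implication ``Clifford parity $\Rightarrow$ trace is odd in $\xi$'' does not hold --- these are independent gradings, and in dimension~$3$ both even and odd Clifford monomials can have nonzero spinor trace. The argument that actually works along these lines is the $\B_e/\B_o$ decomposition of Chapter~4: for $p$ odd one has $\Abb\DD|\DD|^{-p-1}\in\B_e$ (Lemmas~\ref{DiracB} and~\ref{Ejlem}), so the degree-$(-3)$ symbol lies in $E_{-3}$ and the cosphere integral vanishes by $\xi\mapsto-\xi$ (Lemma~\ref{lemadxi}); but for $p$ even the operator lies in $\B_o$, the $\xi$-numerator at degree $-3$ has even total degree, and the cosphere integral need not vanish. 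The case $p=2$ therefore requires a separate argument --- in Chapter~4 it is the real/imaginary $\CC$-parity of Lemma~\ref{lemadk} --- which your single parity claim does not supply. The Lichnerowicz input you invoke only secures the even Clifford parity of $|\DD|^{-p-1}$; it says nothing about the $\xi$-parity of the traced symbol.

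The paper's own proof is entirely different and much shorter. It uses two facts specific to the $q=1$ representation on $\mathbb{S}^3$: all matrix elements of the generators in the basis $v^j_{m,l}$ are real, so $\ncint\Abb F|\DD|^{-p}\in\R$ and hence equals $\ncint\Abb^{*} F|\DD|^{-p}$; and the reality operator $J$ satisfies $JaJ^{-1}=a^{*}$, $J\DD=\DD J$, whence $J\Abb J^{-1}=-\Abb^{*}$ on one-forms. Conjugating by $J$ inside the integral (via $\ncint JXJ^{-1}=\overline{\ncint X}$) then forces $\ncint\Abb F|\DD|^{-p}=-\ncint\Abb F|\DD|^{-p}$. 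Your ``alternative route'' is closer in spirit, but the relevant involution is $J$, which acts inside each chirality by $(m,l)\mapsto(2j-m,\,2j\pm1-l)$, not the $\up\leftrightarrow\dn$ swap you propose.
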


\begin{proof}
Since the representation is real, that is any
matrix elements of the generators are real, so must be the trace of
$\Abb F |\DD|^{-p}$. Hence $\ncint \Abb F |\DD|^{-p}= \ncint \Abb^* F |\DD|^{-p}$.

The reality operator $J$ introduced in \eqref{DefJ} satisfies, when
$q=1$, the commutative relation $JxJ^{-1}=x^*$ for $x \in \A$.
Thus $ J \Abb J^{-1} = - \Abb^*$ and 
$ \ncint \Abb F |\DD|^{-p} = \ncint J \left( \Abb^* F |\DD|^{-p} \right) J^{-1} 
= - \ncint \Abb F |\DD|^{-p}$ and $\ncint \Abb F |\DD|^{-p}=0$.
\end{proof}

For any selfadjoint one-form $\Abb$, $\DD_{\Abb}:=\DD+\wt \Abb=\DD$. Thus, the
spectral action for the real spectral triple
$\big(C^{\infty}(\mathbb{S}^3),\,\H,\, \DD \big)$ for $\DD_\Abb$  is
trivialized by
\begin{align}
   \label{spectralactioncomm} 
\SS(\DD_{\Abb},\Phi,\Lambda) \, = \,2\,\Phi_{3}\,\Lambda^{3}
-\tfrac{1}{2}\, \Phi_1\, \Lambda^1+\mathcal{O}(\Lambda^{-1}).
\end{align}
But it is more natural to compare with the spectral action of
$\DD+\Abb$.  This is  obtained respectively from Lemma \ref{ncintD} and
general heat kernel approach \cite{Gilkey2}:
\begin{align*}
\SS(\DD + \Abb,\Phi,\Lambda) \, = \,2\,\Phi_{3}\,\Lambda^{3}   + \ncint
\vert \DD+\Abb \vert^{-1} \,  \Phi_1\,
\Lambda^1+\mathcal{O}(\Lambda^{-1}) 
\end{align*}
since all terms of \eqref{formuleaction} in $\Lambda^{n-k}$ are zero
for $k$ odd and $\zeta_{\DD+\Abb}(0)=0$ when $n$ is odd: as a
verification, $\ncint \vert \DD+\Abb \vert^{-2}$ is zero according to
Lemma \ref{residus-particuliers}, Lemmas \ref{ncintD} and  Proposition
\ref{notadpole}. Similarly, $\zeta_{\DD+\Abb}(0)=0$ because in
\eqref{termconstanttilde}, all terms with $k$ odd are zero (same proof as in
Proposition \ref{notadpole}) but for $k$ even, it is not that easy to
show that $\ncint \Abb \DD^{-1} \Abb \DD^{-1}= 0$.

Moreover, the curvature term does not depend on $\Abb$:
\begin{lemma}
\label{commAF}
For any one-form $\Abb$ on a commutative spectral triple of dimension
$n$ based on a compact Riemannian spin$^c$ manifold without boundary,
we have 
\begin{align}
  \label{curvature}
 \ncint |\DD+\Abb|^{-(n-2)}= \ncint \vert \DD\vert^{-(n-2)} .
\end{align}
\end{lemma}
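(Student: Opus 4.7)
My plan is to prove the equality by identifying $D := \DD+\Abb$ with a twisted Dirac-type operator on the spinor bundle of $M$ and then invoking Gilkey's formula for the second Seeley--DeWitt coefficient, which is insensitive to the twisting due to a Clifford-algebraic trace identity.

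First, I exploit the commutativity of the algebra: for $b \in C^\infty(M)$, one has $[\DD,b] = c(db)$ (Clifford multiplication by $db$). Writing $\Abb = \sum_i a_i[\DD,b_i]$, this gives $\Abb = c(A)$ where $A := \sum_i a_i\, db_i$ is a smooth (complex) $1$-form on $M$. Thus $D = \DD + c(A)$ is a first-order elliptic differential operator on the spinor bundle sharing the principal symbol $c(\xi)$ with $\DD$.

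Second, I derive the twisted Lichnerowicz formula
$$
D^2 \,=\, \nabla'^{*}\nabla' + \tfrac{s}{4}\,I + c(dA),
$$
where $\nabla'$ is the spin connection twisted by $A$ (regarding $A$ as a $U(1)$-connection on an auxiliary line bundle) and $s$ is the scalar curvature of $M$. This follows from $\DD^2 = \nabla^*\nabla + s/4$ (classical Lichnerowicz) together with the direct computation $\{\DD, c(A)\} = c(dA) - \mathrm{div}(A^\sharp) - 2\nabla_{A^\sharp}$ obtained from $\{c(e^i), c(A)\} = -2A^i$; the $c(A)^2 = -|A|^2$ contribution cancels the $|A|^2$ arising in the expansion of $\nabla'^{*}\nabla'$, while $-\mathrm{div}(A^\sharp) - 2\nabla_{A^\sharp}$ is absorbed into the twisted connection Laplacian.

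Third, by the standard relation between the Wodzicki residue and the second heat-kernel coefficient, for any Laplace-type operator $P = \nabla^{*}\nabla + E$ on a Hermitian bundle over $M^n$ one has
$$
a_2(P) \,=\, \frac{1}{(4\pi)^{n/2}} \int_M \mathrm{tr}_S\!\left[\tfrac{s}{6}\,I - E\right] d\mathrm{vol},
$$
and via the Mellin transform $\ncint |D|^{-(n-2)} = C_n \cdot a_2(D^2)$ for a universal non-zero constant $C_n$ depending only on $n$ (and likewise for $\DD$). Applied to $P_1 = D^2$ with $E_1 = \tfrac{s}{4}I + c(dA)$ and $P_0 = \DD^2$ with $E_0 = \tfrac{s}{4}I$, the difference $\ncint |D|^{-(n-2)} - \ncint |\DD|^{-(n-2)}$ reduces to a non-zero multiple of $\int_M \mathrm{tr}_S(c(dA))\, d\mathrm{vol}$. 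The crucial step is the trace identity $\mathrm{tr}_S(c(dA)) = 0$: indeed $c(dA) = \sum_{i<j}(dA)_{ij}\, c(e^i)c(e^j)$, and for $i\neq j$ the cyclicity of $\mathrm{tr}_S$ combined with the anticommutation $c(e^j)c(e^i) = -c(e^i)c(e^j)$ gives $\mathrm{tr}_S(c(e^i)c(e^j)) = -\mathrm{tr}_S(c(e^i)c(e^j))$, hence vanishes. Therefore $a_2(D^2) = a_2(\DD^2)$ and the claimed equality follows.

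The main obstacle is establishing the twisted Lichnerowicz identity with the exact constants and handling the case when $\Abb$ is not self-adjoint (so that $|D|$ must be interpreted via $\sqrt{D^*D}$). A purely pseudodifferential alternative would be to apply Lemma~\ref{residus-particuliers}(ii) with $\wt A$ replaced by $\Abb$ and $X := \Abb\DD + \DD\Abb + \Abb^2$, and to verify directly that $-\ncint X|\DD|^{-n} + \tfrac{n}{4}\ncint X^2|\DD|^{-n-2} = 0$ via symbol calculus on the cotangent sphere bundle; that route requires careful tracking of subprincipal symbols of $|\DD|^{-n}$, and the same Clifford trace identity $\mathrm{tr}_S(c(dA))=0$ is ultimately the source of the cancellation.
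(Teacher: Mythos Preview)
Your proof is correct and complete for self-adjoint $\Abb$ (which is also the case the paper treats: its proof reads ``$\rho:=\Abb=\Abb^*$''). The Lichnerowicz formula you derive and the vanishing $\mathrm{tr}_S(c(dA))=0$ are exactly the mechanism.

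The paper's proof of this lemma is a bare citation: it invokes a formula on p.~511 of \cite{Polaris} and Proposition~1.149 of \cite{ConnesMarcolli}, both of which package the same heat-kernel computation you spell out. So your main argument is not a different route so much as an unpacking of the cited results. Where you genuinely diverge is in your closing alternative: the paper does carry out a direct pseudodifferential computation elsewhere (Lemma~\ref{scalarcurvature}), but rather than computing subprincipal symbols of $|\DD|^{-n}$ it reduces everything to the principal-symbol level at a point in normal coordinates and uses the Clifford identity $\Tr(\ga^\mu\ga^\nu\ga^\tau\ga_\nu)=(2-d)\Tr(\ga^\mu\ga^\tau)$ to close the cancellation $\tfrac{d}{4}\ncint X^2|\DD|^{-d-2}-\ncint X|\DD|^{-d}=0$. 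That route avoids any subprincipal-symbol bookkeeping; the trade-off is that it requires the explicit formula from Lemma~\ref{residus-particuliers}~$(ii)$ and the tadpole vanishing $\ncint \Abb\DD|\DD|^{-d}=0$ as inputs, whereas your heat-kernel argument is self-contained once Gilkey's $a_2$ formula is granted.
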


\begin{proof}
Follows from \cite[first formula page 511]{Polaris}  with
$\rho:=\Abb=\Abb^*$, $N(\rho)=\rho$ (the constraint $J\rho J^{-1}=\pm \rho$
is not used).

One can also use \cite[Proposition 1.149]{ConnesMarcolli}.
\end{proof}
From Lemma \ref{residus-particuliers},
 $\ncint |\DD+\Abb|^{-(n-2)}= \ncint \vert \DD\vert^{-(n-2)} +
\tfrac{n(n-2)}{4}\, \ncint (\Abb F)^2|\DD|^{-3}+
\tfrac{(n-2)^2}{4}\,\ncint \Abb^2 |\DD|^{-3}$    using $X:=\Abb\DD+\DD
\Abb+\Abb^2$ and $[\vert \DD \vert, \Abb]\in OP^0$, but  again, it is not that
easy to show that the last two terms cancel:  for instance here,
for $\mathbb{B}=b[\DD,b^*]$, we obtain by direct computation (using the
easiest translated Dirac operator $\DD'$)
\begin{align*}
\Tr \big(\mathbb{B}^2 \vert \DD'\vert^{-3-s}\big)&=\Tr \big((\mathbb{B}^*)^2 \vert
\DD'\vert^{-3-s}\big)=\tfrac12 \Tr \big(\mathbb{B}\mathbb{B}^* \vert
\DD'\vert^{-3-s}\big)= \tfrac12 \Tr \big(\mathbb{B}^*\mathbb{B} \vert
\DD'\vert^{-3-s}\big) \\
&=\tfrac 43\sum_{2j \in \N} \tfrac{j+1}{(2j+1)^{2+s}}\,,
\end{align*}
so $\ncint \mathbb{B}^2 \vert \DD' \vert^{-3}=\tfrac23$. Similarly, one checks
that $\ncint  (\mathbb{B}F)^2 \vert \DD \vert^{-3}= \tfrac 12 \ncint \mathbb{B}F\mathbb{B}^*F
\vert \DD \vert ^{-3}=-\tfrac 29$. Thus if $\Abb:=\mathbb{B}+\mathbb{B}^*$, $\ncint \Abb^2
\vert \DD\vert^{-3}=\ncint \Abb^2 \vert \DD'\vert^{-3}=4$ and $\ncint
(\Abb F)^2 \vert \DD \vert^{-3}=-\tfrac 43$ which yields
\eqref{curvature}.

Thus for any one-form $\Abb$ on the 3-sphere,
\begin{align*}
 \quad\SS (\DD +\Abb, \Phi, \Lambda) \, = \,2\,\Phi_{3}\,\Lambda^{3}   -
\tfrac 12 \,\Phi_1\, \Lambda^1+\mathcal{O}(\Lambda^{-1},\Abb)
\end{align*}
which as \eqref{spectralactioncomm} is not identical to
\eqref{strange}  which contains a nonzero constant term $\Lambda^0$
for $q=1$.

\section{Appendix}

\subsection*{A. Proof of Lemma \ref{ncintLin}}

$(i)$ Using same notations of Lemma \ref{technicalxdy}, we obtain by
definition of $\tau_1$,
\begin{align}
\tau_1\big(\pi_+(t_{k,p})\big) &= \delta_{k,0} \,\delta_{p,0}\
\delta_{{\a_1}+{\a_2}-{\a_3} +{\beta_1} +{\beta_2}-{\beta_3},0} \, ,
\label{tau1t} \\
\tau_1\big(\pi_-(u_{k,p})\big) &= \delta_{\wt k,0} \,\delta_{\wt
p,0}\  \delta_{{\a_1}-{\a_2}+{\a_3} +{\beta_1}
-{\beta_2}+{\beta_3},0}\, .\label{tau1u}
\end{align}
We get  $\tau_1\big(\pi_+(t_{k,p})\big)
\,\tau_1\big(\pi_-(u_{k,p})\big) = \delta_{k,0}\,\delta_{p,0}\,
\delta_{{\a_2},0} \,\delta_{{\a_3},0} \,
\delta_{{\beta_2},0} \, \delta_{{\beta_3},0} \,
\delta_{{\a_1},-{\beta_1}}$,
so Lemma  \ref{technicalxdy} gives the result.

$(ii)$ Since $\pi_+(t_{k,p})\eps_n = q^+_{k,p,n} \eps_{n+ r^+_{k,p}}$
and $\pi_-(u_{k,p})\eps_n = q^-_{k,p,n} \eps_{n+ r^-_{k,p}}$, we get,
\begin{align}
\tau_0\big(\pi_+(t_{k,p})\big)&=\delta_{r^+_{k,p},0}\,
\sum_{n=0}^\infty \big(q^+_{k,p,n}-\, \delta_{k,0} \, \delta_{p,0}\,
\delta_{{\a_1}+{\a_2}-{\a_3} +{\beta_1} +{\beta_2}-{\beta_3},0}\big)
\, ,\label{tau0tkp}\\
\tau_0\big(\pi_-(u_{k,p})\big)&=\delta_{r^-_{k,p},0}\,
\sum_{n=0}^\infty \big(q^-_{k,p,n}-\, \delta_{\wt k,0}\, \delta_{\wt
p,0} \,
\delta_{{\a_1}-{\a_2}+{\a_3} +{\beta_1} -{\beta_2}+{\beta_3},0}\big)
\, .\label{tau0ukp}
\end{align}
With (\ref{tau1t}) and (\ref{tau0ukp}) we get
\begin{align*}
\tau_1\big(\pi_+(t_{k,p})\big) \, \tau_0\big(\pi_-(u_{k,p})\big) &=
\delta_{k,0}\,\delta_{p,0}\,
 \delta_{{\a_2}+{\beta_2},{\a_3}+{\beta_3}}
\,\delta_{{\a_1},-{\beta_1}} \sum_{n=0}^\infty \big( \delta_{k,0}
\,\delta_{p,0}\, q^-_{k,p,n}-\delta_{\a_3+\beta_3,0}\big)\\ &=
\delta_{k,0}\,\delta_{p,0}\,
 \delta_{{\a_2}+{\beta_2},{\a_3}+{\beta_3}}
\,\delta_{{\a_1},-{\beta_1}} w_1(\beta_1,\a_3+\beta_3).
 \end{align*}

Using (\ref{tau1u}) and (\ref{tau0tkp}),
\begin{align*}
\tau_0\big(\pi_+(t_{k,p})\big) \, \tau_1\big(\pi_-(u_{k,p})\big) &=
\delta_{\wt k,0}\,\delta_{\wt p,0}\,
 \delta_{{\a_2}+{\beta_2},{\a_3}+{\beta_3}}
\,\delta_{{\a_1},-{\beta_1}} \sum_{n=0}^\infty \big( \delta_{\wt k,0}
\,\delta_{\wt p,0}\, q^+_{k,p,n}-\delta_{\a_3+\beta_3,0}\big)\\ &=
\delta_{\wt k,0}\,\delta_{\wt p,0}\,
 \delta_{{\a_2}+{\beta_2},{\a_3}+{\beta_3}}
\,\delta_{{\a_1},-{\beta_1}} w_1(\beta_1,\a_3+\beta_3).
 \end{align*}
Lemma \ref{technicalxdy} yields the result.

\subsection*{B. Proof of Lemma \ref{ncintA2}}
We have
\begin{align}
\tau_1(\pi_+(t_{K,P})) &= \delta_{K,0}\,\delta_{P,0}\,
\delta_{A_1+A_2-A_3+B_1+B_2-B_3,0}\label{tau1tKP}\,,\\
\tau_1(\pi_-(u_{ K, P})) &= \delta_{\wt K,0}\,\delta_{\wt P,0}\,
 \delta_{A_1-A_2+A_3+B_1-B_2+B_3,0}\label{tau1uKP} \,.
\end{align}
and
 \begin{align}
\tau_0(\pi_+(t_{K,P})) &= \delta_{r^+_{K,P},0} \sum_{n=0}^\infty
\big(q^+_{K,P,n}  -\delta_{K,0}\,\delta_{P,0} \,
\delta_{A_1+A_2-A_3+B_1+B_2-B_3,0}\big)\label{tau0tKP}\,,\\
\tau_0(\pi_-(u_{ K, P})) &=  \delta_{r^-_{K,P},0} \sum_{n=0}^\infty
\big(q^-_{K,P,n}  -\delta_{\wt K,0} \, \delta_{\wt P,0} \,
 \delta_{A_1-A_2+A_3+B_1-B_2+B_3,0}\big)\label{tau0uKP}\,.
\end{align}
$(i)$ Equations (\ref{tau1tKP}) and (\ref{tau1uKP}) give $(\tau_1
\otimes \tau_1)\, r(\ul A\, \ul A')^0
=\delta_{A_1,-B_1}\delta_{A_2,0}\delta_{A_3,0}\delta_{B_2,0}\delta_{B_3,0}\,
\la_{0,0}$. A computation of $v_{0,0}$ with
$\delta_{A_1,-B_1}\,\delta_{A_2,0}\,\delta_{A_3,0}\,\delta_{B_2,0}\,\delta_{B_3,0}=1$
gives the result.

$(ii)$ Equations (\ref{tau1tKP}) and (\ref{tau0uKP}) yield
\begin{align*}
\tau_1(\pi_+(t_{K,P}))\,\tau_0(\pi_-(u_{K,P}))&=\,
\delta_{K,0}\,\delta_{P,0} \,\delta_{A_2+B_2,A_3+B_3}
\,\delta_{A_1,-B_1} \\ &\hspace{2cm}\times
v_{\beta_1,\a'_1,\beta'_1}((\a_2+\beta_2+\a_3+\beta_3)(\a'_1+\beta'_1),A_3+B_3).
\end{align*}

Equations (\ref{tau0tKP}) and (\ref{tau1uKP}) yield
\begin{align*}
\tau_0(\pi_+(t_{K,P}))\,\tau_1(\pi_-(u_{K,P}))&=\, \delta_{\wt
K,0}\,\delta_{\wt P,0} \,\delta_{A_2+B_2,A_3+B_3} \,\delta_{A_1,-B_1}
\\ &\hspace{2cm}\times
v_{\beta_1,\a'_1,\beta'_1}((\a_2+\beta_2+\a_3+\beta_3)(\a'_1+\beta'_1),A_3+B_3)
\end{align*}
and the result follows.

$(iii)$ With (\ref{rproduct}) a direct computation gives
\begin{align}
\tau_1(\pi_+(t_{K,P})) &= \delta_{K,0}\,\delta_{P,0}
\,\delta_{A_1+A_2-A_3+B_1+B_2-B_3,0}\label{tau1tKP3} \,,\\
\tau_1(\pi_-(u_{ K, P})) &= \delta_{\wt K,0}\,\delta_{\wt P,0}\,
 \delta_{A_1-A_2+A_3+B_1-B_2+B_3,0}\label{tau1uKP3}\,.
\end{align}
Using (\ref{tau1tKP3}) and (\ref{tau1uKP3}), $(\tau_1 \otimes
\tau_1)\, \big(r(\ul A\, \ul A'\,  \ul A'')^\circ \big)
=\delta_{A_1,-B_1}\,\delta_{A_2,0}\,\delta_{A_3,0}\,\delta_{B_2,0}\,\delta_{B_3,0}\,
v_{0,0}$. A computation of $v_{0,0}$ with
$\delta_{A_1,-B_1}\,\delta_{A_2,0}\,\delta_{A_3,0}\,\delta_{B_2,0}\,\delta_{B_3,0}=1$
gives the result.

$(iv)$ We have $\delta(M^{\a}_\beta) M^{\a'}_{\beta'} =
\delta(x)\delta(y)
x'\delta(y') + x\delta^2(y) x'\delta(y')$ where $x,x',y,y'$ are
monomials ($\ul \pi$ omitted). Since
\begin{align*}
\ul \pi(x) = \sum_{k}
\genfrac(){0pt}{1}{\a}{k} a_{+,\a_1}^{\wh k_1} a_{-,\a_1}^{k_1}
b_+^{\wh k_2} b_-^{k_2} {b_+^*}^{\wh k_3} {b_-^*}^{k_3}=:\sum_k
\genfrac(){0pt}{1}{\a}{k} c_k,
\end{align*}
we get $\delta(\ul \pi(x)) =\sum_{k}g(k) \genfrac(){0pt}{1}{\a}{k}
\,c_k$.

Similarly, $\delta(\ul \pi(y)) =\sum_{p}g(p)
\genfrac(){0pt}{1}{\beta}{p}\,c_p$ and $\delta^2(\ul \pi(y))
=\sum_{p}g(p)^2 \genfrac(){0pt}{1}{\beta}{p}c_p$.

Thus, with $c_{K,P} :=c_k \,c_p \,c_{k'} \,c_{p'} $,
\begin{align*}
&\delta(x)\delta(y) x'\delta(y') = \sum_{K,P}
g(k)\,g(p)\,g(p')\genfrac(){0pt}{1}{\a}{K}\genfrac(){0pt}{1}{\beta}{P}\,
c_{K,P}\,,\\
&x\delta^2(y) x'\delta(y') = \sum_{K,P}
g(p)^2g(p')\genfrac(){0pt}{1}{\a}{K}\genfrac(){0pt}{1}{\beta}{P}\,
c_{K,P}\,,\\
& r(\delta(M^{\a}_\beta) M^{\a'}_{\beta'})^0 =
\sum_{K,P}\delta_{h_{K,P},0}
\big(g(k)+g(p)\big)\, g(p)\,
g(p')\,\genfrac(){0pt}{1}{\a}{K}\genfrac(){0pt}{1}{\beta}{P}
\,r(c_{K,P})=: \sum_{K,P} \la_{K,P}\  r(c_{K,P})\, .
\end{align*}
Since
$r(c_k) = (-q)^{k_1} (-1)^{{\a_2}+{\a_3}}\pi_+(t_k) \otimes
\pi_-(u_k)$ with $t_k,u_k$ defined by
$$
t_k:=a_{\a_1}^{\wh k_1}\, b^{k_1}\,a^{\wh k_2}\,b^{k_2}\,{a^*}^{\wh
k_3}\,b^{k_3} \text{  and } u_k:=a_{\a_1}^{\wh k_1}\,b^{k_1}\,b^{\wh
k_2}\,{a^*}^{k_2}\,{b}^{\wh k_3}\,a^{k_3},
$$
we get
$$
r(\delta(M^{\a}_\beta) M^{\a'}_{\beta'})^0 = \sum_{K,P}\la_{K,P}\,
(-q)^{k_1+k'_1+p_1+p'_1} (-1)^{A_2+A_3+B_2+B_3}\pi_+(t_{K,P}) \otimes
\pi_-(u_{K,P})
$$
where $t_{K,P}=t_k t_p t_{k'} t_{p'}$ and  $u_{K,P}=u_k u_p u_{k'}
u_{p'}$. Direct computations yield
\begin{align*}
\tau_1 \big(\pi_+(t_{K,P})\big) &= \delta_{K,0}\,\delta_{P,0}\,
\delta_{A_1+A_2-A_3+B_1+B_2-B_3,0}\,,\\
\tau_1 \big(\pi_-(u_{ K, P})\big) &= \delta_{\wt K,0}\,\delta_{\wt
P,0}\,
 \delta_{A_1-A_2+A_3+B_1-B_2+B_3,0}\,.
\end{align*}
The result follows.

$(v)$ For the last equality, note that by $(iv)$
$$
\ncint \delta(A) A |\DD|^{-3} = -2 \sum_{\a_1,\a'_1,\beta_1,\beta'_1}
(\a'_1+\beta'_1)
{\beta_1}\beta'_1\, A^{\beta_1 0 0}_{\a_1 0 0 0}\,A^{\beta'_1 0
0}_{\a'_1 0 0}\, \delta_{\a_1+\a'_1+\beta_1+\beta'_1,0}.
$$
The following change of variables $\a_1 \leftrightarrow \a'_1$,
$\beta_1\leftrightarrow \beta'_1$, implies by symmetry that this is
equal to zero.

\subsection*{C. Proof of Lemma \ref{ncintJlem}}

$(i)$ Following notations of Lemma \ref{technicalxdy}, we have
$$
 M^{\a}_\beta J M^{\a'}_{\beta'} J^{-1}= \sum_{K,P} v_{K,P}
\,c_{k,p}Jc_{k',p'}J^{-1}
$$
where $K=(k,k')$, $P=(p,p')$, $\la_{K,P}=
g(p)g(p')v_{k}v_{k'}w_pw_{p'}$. Thus,
$$
 \wt\rho(M^{\a}_\beta J M^{\a'}_{\beta'} J^{-1})=
(-1)^{A_2+A_3+B_2+B_3}\sum_{K,P}(-q)^{k_1+k'_1+p_1+p'_1} \la_{K,P} \,
T^+_{K,P}\otimes T^-_{K,P}
$$
where $T^+_{K,P}:=\pi'_+(t_k t_p)\wh \pi_{+}(t_{k'} t_{p'})$
 and $T^-_{K,P}:=\pi'_-(u_k u_p)\wh \pi_{-}(u_{k'} u_{p'})$ with
\begin{align*}
t_k&:=a_{\a_1}^{\wh k_1}\,{b^{*}}_{\a_1}^{k_1}\,
 a^{\wh k_2}\, {b^*}^{k_2}\, {a^*}^{\wh k_3} b^{k_3}, \\
u_k&:=a_{\a_1}^{\wh k_1}\,{b^{*}}_{\a_1}^{k_1}\,
 b^{\wh k_2}\, {a^*}^{k_2}\, {b^*}^{\wh k_3} a^{k_3} .
\end{align*}
A direct computation leads to
\begin{align*}
\tau_1(T^+_{K,P})&=\delta_{K,0} \,\delta_{P,0}\
\delta_{{\a_1}+{\a_2}-{\a_3} +{\beta_1} +{\beta_2}-{\beta_3},0} \,
\delta_{{\a'_1}+{\a'_2}-{\a'_3} +{\beta'_1}
+{\beta'_2}-{\beta'_3},0}\, ,\\
\tau_1(T^-_{K,P})&=\delta_{\wt K,0} \,\delta_{\wt
P,0}\  \delta_{{\a_1}-{\a_2}+{\a_3} +{\beta_1}
-{\beta_2}+{\beta_3},0} \, \delta_{{\a'_1}-{\a'_2}+{\a'_3} +{\beta'_1}
-{\beta'_2}+{\beta'_3},0}
\end{align*}
which gives the result.

$(ii)$
Using the commutation relations on $\A$, we see that there are real
functions of $(K,P)$, denoted $\sigma^t_{K,P}$ and $\sigma^u_{K,P}$
such that
\begin{align*}
T^+_{K,P} &= q^{\sigma_{K,P}^t}\, \pi'_+(t_{k,p})\wh \pi_+(t_{k',p'}),
\\
T^-_{K,P} &= q^{\sigma_{K,P}^u}\, \pi'_-(u_{k,p}) \wh
\pi_-(u_{k',p'}),\\
t_{k,p}&:= a_{\a_1}^{\wh k_1}\,
 a^{\wh k_2}\,  {a^*}^{\wh k_3}\,a_{\beta_1}^{\wh p_1}\,
 a^{\wh p_2}\,  {a^*}^{\wh p_3}\,
{b^{*}}_{\a_1}^{k_1}\,{b^{*}}_{\beta_1}^{p_1}\,
{b^*}^{k_2+p_2}b^{k_3+p_3},\\
u_{k,p}&:=a_{\a_1}^{\wh k_1}\,
 {a^*}^{ k_2}\,  {a}^{k_3}\,a_{\beta_1}^{\wh p_1}\,
 {a^*}^{ p_2}\,  {a}^{p_3}\,
{b^{*}}_{\a_1}^{k_1}\,{b^{*}}_{\beta_1}^{p_1}\, {b}^{\wh k_2+\wh
p_2}{b^*}^{\wh k_3+\wh p_3}.
\end{align*}

We have, under the hypothesis $\tau_1(T^-_{K,P})=1$,
\begin{align*}
&\wh \pi_+(t_{k',p'})\eps_{m,2j} =
(-1)^{\la'}q^{(2j-m)\la'}\,q^{\up_{\a'_1}}_{2j-m-s+\beta'_1,|\a'_1|}\,
q^{\up_{\beta'_1}}_{2j-m-s,|\beta'_1|}\,\eps_{m+s,2j}\, ,\\
&s:=- \a'_2 + \a'_3 -\beta'_2+\beta'_3=\a'_1+\beta'_1 \,,\\
&\la':=\a'_2+\a'_3+\beta'_2+\beta'_3\,,\\
&\la:=\a_2+\a_3+\beta_2+\beta_3 \,\\
&\tau_1(T^+_{K,P})=  \delta_{\la,0}\, \delta_{\la',0}\, .
\end{align*}
and then,
\begin{align*}
&(T^+_{K,P})_{m,2j}=q^{\sigma^t_{K,P}+s\la}(-1)^{\la'}
q^{(2j-m)\la'+m\la}\,F_m \,F'_{2j-m}\,
\delta_{A_1+B_1,0}\,,\\
&F'_{2j-m}:=q^{\up_{\a'_1}}_{2j-m-\a'_1,|\a'_1|}\,
q^{\up_{\beta'_1}}_{2j-m-\a'_1-\beta'_1,|\beta'_1|}\,,\\
&F_{m}:=
q^{\up_{\a_1}}_{m-\a_1,|\a_1|}q^{\up_{\beta_1}}_{m-\beta_1-\a_1,|\beta_1|}\,.
\end{align*}
Following the proof of Lemma \ref{tau0ext}, we see that
$\tau_0(T^+_{K,P})$ is possibly nonzero only in the two cases
$\la'=0$ or $\la=0$.

Suppose first $\la=\la'=0$. In that case, we have
$$
\tau_0(T^+_{K,P})= \lim_{2j\to\infty} \sum_{m=0}^{2j}
\big((q^{\up_{\beta_1}}_{m,|\beta_1|}
q^{\up_{\beta'_1}}_{2j-m,|\beta'_1|})^2 -1\big) =
\sum_{m=0}^{\infty}\big((q^{\up_{\beta_1}}_{m,|\beta_1|})^2 -1\big) +
\sum_{m=0}^{\infty}\big((q^{\up_{\beta'_1}}_{m,|\beta'_1|})^2 -1\big)
$$
where the second equality comes from Lemma \ref{tech-separation}.

In the case $(\la=0,\la'>0)$, we get $\a'_1=-\beta'_1$ and thus,
$$
(T^+_{K,P})_{m,2j}=q^{\sigma^t_{K,P}} q^{m\la}\,
(q^{\up_{\beta_1}}_{m,|\beta_1|}
q^{\up_{\beta'_1}}_{2j-m,|\beta'_1|})^2
\delta_{\a_1+\beta_1,0}.
$$
Let us note $U_{2j} = \sum_{m=0}^{2j} q^{m\la}\,
(q^{\up_{\beta_1}}_{m,|\beta_1|}
q^{\up_{\beta'_1}}_{2j-m,|\beta'_1|})^2 $
and $L_{2j} = \sum_{m=0}^{2j} q^{m\la}\,
(q^{\up_{\beta_1}}_{m,|\beta_1|})^2$.

Suppose $\beta'_1>0$.
Since $(q^{\up_{\beta'_1}}_{2j-m,|\beta'_1|})^2-1 = \sum_{|p|_1 \neq
0,
p_i\in \set{0,1}} (-1)^{|p|_1}q^{r_p}\, q^{2(2j-m)|p|_1}$ where we
have
$r_p=2+\cdots +2\beta'_1$.
As in the proof of Lemma \ref{tau0ext} $(ii)$, we can conclude that
$U_{2j}-L_{2j}$ converges to 0.
The case $\beta'_1\leq 0$ is similar.

In the other case $(\la>0,\la'=0)$, the
arguments are the same, replacing $\la$ by $\la'$ and $\a_1$,
$\beta_1$ by $\a'_1$,
$\beta'_1$. Finally,
\begin{align*}
&\tau_0(T^+_{K,P})\tau_1(T^-_{K,P}) = \delta_{\wt K,0}\,\delta_{\wt
P,0}\,\delta_{\a_1,-\beta_1}\,\delta_{\a'_1,-\beta'_1}
(\delta_{\la',0}\,\delta_{\a_2+\beta_2,\a_3+\beta_3}\,
s_{\a,\beta}+\delta_{\la,0}\,\delta_{\a'_2+\beta'_2,\a'_3+\beta'_3}
\,s_{\a',\beta'}),\\
&s_{\a\beta}:= q^{\beta_1(\a_3-\a_2)}\,\sum_{m=0}^{\infty}\big(
q^{m\la}\, (q^{\up_{\beta_1}}_{m,|\beta_1|})^2
-\delta_{\la,0}\big)\, .
\end{align*}
A similar computation of $\tau_0(T^-_{K,P})$ can be done following
the same arguments. We find
eventually
$$
\tau_1(T^+_{K,P})\tau_0(T^-_{K,P}) = \delta_{ K,0}\,\delta_{
P,0}\,\delta_{\a_1,-\beta_1}\,\delta_{\a'_1,-\beta'_1}
(\delta_{\la',0}\,\delta_{\a_2+\beta_2,\a_3+\beta_3}\,
s_{\a,\beta}+\delta_{\la,0}\,\delta_{\a'_2+\beta'_2,\a'_3+\beta'_3}
\,s_{\a',\beta'})
$$
and the result follows.

$(iii)$ The same arguments of $(i)$ apply here with minor changes.

$(iv)$ follows from a slight modification of the proof of Lemma
\ref{ncintA2} $(iv)$.

$(v)$ is a straightforward consequence of $(i,ii,iii,iv)$. 

\subsection*{D. Proof of Lemma \ref{tech-separation}}

We give a proof for $\beta$ and $\beta'>0$, the other
cases being similar. 

Since
$(q^{\up_{\beta}}_{m,|\beta|})^2= \sum_{p_i\in \set{0,1}}
(-1)^{|p|_1}q^{r_p} q^{2|p|_1 m}$ where
$p=(p_1,\cdots,p_{\beta})$ and $r_p:=2(p_1+\cdots+\beta p_{\beta})$,
we get, with the notations
$\la_{p,p'}:=(-1)^{|p+p'|_1}q^{r_p+r_{p'}}$ and $U_{2j}:=
\sum_{m=0}^{2j}\,(q^{\up_{\beta}}_{m,|\beta|}q^{\up_{\beta'}}_{2j-m,|\beta'|})^2
-1$,
\begin{align*}
 U_{2j}&=
\sum_{m=0}^{2j}\sum_{|p+p'|_1>0} \la_{p,p'}
q^{2|p|_1 m +2|p'|_1(2j-m)}\\
&= \sum_{|p|_1\geq |p'|_1, |p|_1>0} \la_{p,p'} V_{2j,p,p'} +
\sum_{|p|_1< |p'|_1, |p'|_1>0}\la_{p,p'}
V'_{2j,p,p'}
\end{align*}
where
$$
V_{2j,p,p'}=q^{4j|p'|_1}\sum_{m=0}^{2j}
q^{2(|p|_1-|p'|_1) m}, \qquad V'_{2j,p,p'}=q^{4j|p|_1}\sum_{m=0}^{2j}
q^{2(|p'|_1-|p|_1) m}.
$$
It is clear that $V_{2j,p,p'}$ has $0$ for limit when $j\to \infty$
when $|p'|_1>0$, and
$V'_{2j,p,p'}$ has $0$ for limit when $j\to \infty$ when $|p|_1>0$.
As a consequence,
\begin{align*}
 U_{2j}&= \sum_{|p|_1>0} \la_{p,0} V_{2j,p,0} + \sum_{
|p'|_1>0}\la_{0,p'}
V'_{2j,0,p'} + o(1).
\end{align*}
The result follows as

$\sum_{m=0}^{2j}\big((q^{\up_{\beta}}_{m,|\beta|})^2 -1\big) =
\sum_{|p|_1>0} \la_{p,0} V_{2j,p,0}$ and
$\sum_{m=0}^{2j}\big((q^{\up_{\beta'}}_{m,|\beta'|})^2 -1\big)
=\sum_{ |p'|_1>0}\la_{0,p'}
V'_{2j,0,p'}$.

\chapter{Tadpoles and commutative spectral triples}

\section{Introduction}

The history of the noncommutative residue is now rather long
\cite{Kassel}, so we sketch it only briefly: after some approaches by
Adler \cite{Adler} and  Manin \cite{Manin} on the Korteweg-de Vries
equation using a trace on the algebra of formal pseudodifferential
operators in one dimension, and of Guillemin with his "soft" proof of
Weyl's law on the eigenvalues of an elliptic operator
\cite{Guillemin}, the noncommutative residue  in any dimension was essentially initiated par Wodzicki in his thesis \cite{Wodzicki2}. This residue gives, up to a multiplicative factor, the unique non-trivial trace on the algebra of pseudodifferential operators. Then, a link between this residue and the Dixmier trace was given by Connes in \cite{Connesaction}. Thanks to Connes again \cite{Book,Cgeom}, the setting of classical pseudodifferential operators on Riemannian
manifolds without boundary was extended to a noncommutative geometry
where the manifold is replaced by a not necessarily commutative
algebra $\A$ plus a Dirac-like operator $\DD$ via the notion of
spectral triple $(\A,\,\H,\,\DD)$ where $\H$ is the Hilbert space
acted upon by $\A$ and $\DD$. The previous Dixmier trace is extended to the algebra of pseudodifferential operators naturally associated to the triple $(\A,\,\H,\,\DD)$. This spectral point of view appears quite natural in the general framework of noncommutative geometry which goes beyond Riemannian geometry. From a physicist's point of view, this framework has many advantages: the spectral approach is motivated by quantum physics but not only since classical observables and infinitesimals are now on the same footing and even Dixmier's trace is related to renormalization.  It is amazing to observe that most of classical geometrical notions like those defined in relativity or particle physics can be extended in
this really noncommutative setting. Among others, some physical
actions still make sense as in \cite{Connesaction}  where Dixmier's trace is used to compute the Yang--Mills action in the context of noncommutative differential geometry. Another example is the Einstein--Hilbert action: on a compact
spin Riemannian 4-manifold, $\ncint \DD^{-2}$ coincides (up to a
universal scalar) with the Einstein--Hilbert action, where $\ncint$ is
precisely the noncommutative residue, a point first noticed by Connes; then, there was a brute force proof \cite{Kastler} and generalization \cite{KW} (see also \cite{Ack}) of this fact which is particularly relevant here.

Since then, the case of compact manifolds with
boundary has been studied, making clearer the links between
noncommutative residues, Dixmier's trace and heat kernel expansion. This was achieved 
using Boutet de Monvel's algebra \cite{FGLS, Schrohe, GSc}, in the
case of conical singularities \cite{Schrohe1, Lescure} or when the
symbols are log-polyhomogeneous \cite{Lesch}. Besides, there are some applications of noncommutative residues for such manifolds to classical gravity \cite{Wang} and to the unification of gravity with fundamental interactions \cite{CC2}.
Needless to say that in field theory, the
one-loop divergencies, anomalies and different asymptotics
of the effective action are directly obtained from the heat kernel
method \cite{V}, so all of the above quoted mathematical results have
profound applications to physics. 

\medskip

We are interested in possible cancellation of terms in the Chamseddine--Connes spectral action formula 
\eqref{formuleaction}. We focus
essentially on commutative spectral triples, for which we show that there
are no tadpoles (see Definition \ref{Deftadpole}). In particular, terms like $\ncint A\DD^{-1}$ are zero: in field
theory, $\DD^{-1}$ is the Feynman propagator and $A\DD^{-1}$ is a
one-loop graph with fermionic internal line and only one external
bosonic line $A$ looking like a tadpole. More generally, the tadpoles are the $A$-linear terms in \eqref{formuleaction}.
\begin{fmffile}{tadpolegraph}
\[
{\begin{picture}(100,60)
\put(0,0){\begin{fmfgraph}(50,60)
\fmfleft{l}
\fmfright{r}
\fmffreeze
\fmf{photon,tension=3}{r,w}
\fmf{photon,tension=3}{w,v}
\fmf{fermion,right}{v,l}
\fmf{fermion,right}{l,v}
\end{fmfgraph}}
\put(-10,45){\mbox{${\cal D}^{-1}$}}
\put(55,27){\mbox{$A$}}
\end{picture}} 
\]
\end{fmffile}
\vspace{-1cm}

In \cite{Ponge}, computations of $\ncint \vert \DD \vert^{-k}$ for some values of $k$ are presented and formula like \eqref{termconstanttilde} also appear in \cite{LP} in the context of pseudodifferential elliptic operators.
As a starting point, we investigate in section \ref{tadboundsec} the existence of tadpoles for manifolds with boundaries, considering following Chamseddine and Connes \cite{CC2} the case of a chiral boundary condition on the Dirac operator. One of their original motivations was to show that the first two terms in the spectral action come with the right ratio and sign for their coefficients as in the modified Euclidean action used in gravitation. We generalize this approach to the perturbed Dirac operator by an internal fluctuation, ending up with no tadpoles up to order 5.

However, this approach stems from explicit computations of first heat kernel coefficients, so we cannot conclude that other integrals of the same type as tadpoles are zero. It is then natural to restrict to manifolds without  boundary via a different method. 

After some useful facts using the link between, we conclude in section \ref{commtadsection}, and using the results of section \ref{tadsection} and pseudodifferential techniques, that a lot of terms in \eqref{formuleaction} are zero.

\section{Tadpoles and compact spin manifolds with boundary}\label{tadboundsec}

Let $M$ be a smooth compact Riemannian $d$-dimensional manifold with
smooth boundary $\del M$ and let $V$ be a given smooth vector bundle on
$M$. We denote $dx$ (resp. $dy$) the Riemannian volume form on $M$ (resp. on $\del M$).

Recall that a differential operator $P$ is of Laplace type if it has locally the form
\begin{equation}
P = - (g^{\mu\nu} \del_\mu \del_\nu + \mathbb{A}^\mu\del_\mu +\mathbb{B})  \label{Lapl}
\end{equation}
where $(g^{\mu\nu})_{1\leq \mu,\nu\leq d}$ is the inverse matrix associated
to the metric $g$ on $M$, and $\mathbb{A}^\mu$ and $\mathbb{B}$ are smooth
$L(V)$-sections on $M$ (endomorphisms). A differential operator $D$ is of Dirac type if $D^2$ is of Laplace type, or equivalently if it has locally the following form
$$
D = -i \ga^\mu \del_\mu + \phi
$$
where $(\ga^\mu)_{1\leq \mu\leq d}$ gives $V$ a Clifford module structure: $\set{\ga^\mu,\ga^\nu}=2g^{\mu\nu}\Id_V$, ${(\ga^\mu)}^*=\ga^\mu$.

A particular case of Dirac operator is given by the following formula 
\begin{equation}
D = -i\ga^\mu (\del_\mu + \om_\mu) \label{phiDirac}
\end{equation}
where the $\om_\mu$ are in $C^\infty\big(L(V)\big)$.

If $P$ is a Laplace type operator of the form (\ref{Lapl}), then (see \cite[Lemma 1.2.1]{Gilkey2})
there is an unique connection $\nabla$ on $V$ and an unique
endomorphism $E$ such that $P = L(\nabla,E)$ where by definition
\begin{align*}
&L(\nabla,E) :=  -(\Tr_g \nabla^2  + E), \quad \nabla^2(X,Y):= [\nabla_X,\nabla_Y] -\nabla_{\nabla^{g}_X Y} \, ,
\end{align*}
$X,Y$ are vector fields on $M$ and $\nabla^g$ is the Levi-Civita connection on $M$. Locally 
$$
\Tr_g \nabla^2 := g^{\mu\nu}(\nabla_\mu \nabla_\nu -\Ga^{\rho}_{\mu\nu} \nabla_\rho)
$$
where $\Ga^{\rho}_{\mu \nu}$ are the Christoffel coefficients of $\nabla^g$.
Moreover (with local frames of $T^*M$ and $V$), $\nabla =
dx^\mu\ox (\del_\mu +\om_\mu)$ and $E$ are related to $g^{\mu\nu}$,
$\mathbb{A}^\mu$ and $\mathbb{B}$ through
\begin{align}
\om_\nu&=  \half g_{\nu\mu}(\mathbb{A}^\mu +g^{\sg\eps} \Ga_{\sg \eps}^{\mu}\Id )
\label{omeganu}\, ,\\
E&=  \mathbb{B}-g^{\nu\mu}(\del_{\nu} \om_\mu +\om_\nu\om_\mu -\om_\sg
\Ga_{\nu\mu}^\sg ) \label{EEquation}  \, .
\end{align}

Suppose that $P=L(\nabla,E)$ is a Laplace type operator on $M$, and assume that $\chi$ is an endomorphism of $V_{\del M}$ so that
$\chi^2=\Id_V$. We extend $\chi$ on a collar neighborhood $\CC$ of $\del M$
in $M$ with the condition $\nabla_d \,(\chi) = 0$ where the
$d^{th}$-coordinate here is the radial coordinate (the geodesic
distance of a point in $M$ to the boundary $\del M$).

Let $V_\pm:=\Pi_{\pm} V$ be the sub-bundles of $V$ on $\CC$ where $\Pi_\pm:=\half(\Id_V\pm \chi)$ are the projections on the $\pm1$ eigenvalues of $\chi$.
We also fix an auxiliary endomorphism $S$ on ${V_+}_{\del M}$ extended to $\CC$.

This allows to define the mixed boundary operator $\B=\B(\chi,S)$ as
\begin{equation}
\B s := \Pi_+ (\nabla_{d} +S)\Pi_+ s_{ |\del M} \oplus
\Pi_{-}s_{|\del M}\,, \quad s\in C^\infty(V) .\label{Mixed}
\end{equation}
These boundary conditions generalize Dirichlet ($\Pi_-=\Id_V$) and Neumann--Robin ($\Pi_+=\Id_V$) conditions.

We define $P_\B$ as the realization of $P$ on $\B$, that is to say the closure of $P$ defined
on the space of smooth sections of $V$ satisfying the boundary condition $\B s=0$.
\medskip

We are interested in the behavior of the heat kernel coefficients $a_{d-n}$ defined through its expansion as $\Lambda\to \infty$ (see \cite[Theorem 1.4.5]{Gilkey2})
$$
\Tr(e^{-\Lambda^{-2} D^2_\B}) \sim \sum_{n\geq 0} \Lambda^{d-n}\,
a_{d-n}(D,\B)
$$ 
where $D$ is a self-adjoint Dirac type operator. Moreover, we will use a perturbation $D\to D+A$, where $A$ is a 1-form (a linear combination of terms of the type $f[D,g]$, where $f$ and $g$ are smooth functions on $M$). More precisely, we investigate the linear
dependence of these coefficients with respect to $A$. It is clear
that, since $A$ is a differential operator of order 0, a perturbation
$D\mapsto D+A$ transforms a Dirac type operator into another Dirac
type operator. 

This perturbation has consequences on the $E$ and $\nabla$ terms:

\begin{lemma}
\label{Perturbation}
Let $D$ be a Dirac type operator locally of the form
(\ref{phiDirac}) such that $\nabla_\mu:=\partial_\mu + \om_\mu$ is
connection compatible with the Clifford action $\ga$. Let $A$ be a 1-form associated to $D$, so that $A$ is locally of the form $-i\ga^\mu a_\mu$ with $a_\mu \in C^\infty(U)$, $(U,x_\mu)$ being a local coordinate frame on $M$.  

Then $(D+A)^2=L(\nabla^A,E^A)$ and $D^2=L(\nabla,E)$ where, 
\begin{align*}
&\om^A_{\mu}= \om_\mu + a_\mu\, ,\,\,  \text{thus }
\nabla^A_\mu=\nabla_\mu+a_\mu \,\text{Id}_V,\\
&E^A = E +\tfrac{1}{4}[\ga^\mu, \ga^\nu]F_{\mu \nu} , \quad E=\tfrac{1}{2}\ga^\mu \ga^\nu [\nabla_{\mu}, \nabla_\nu] , \quad F_{\mu\nu}:=\partial_\mu(a_\nu)-\partial_\nu(a_\mu)
\end{align*}
Moreover, the curvature of the connection $\nabla^A\,$is
$\Omega_{\mu \nu}^{A}=\Omega_{\mu \nu}+F_{\mu \nu}$, where $\Omega_{\mu\nu}=[\nabla_\mu,\nabla_\nu]$.

In particular $\Tr E^A=\Tr E$.
\end{lemma}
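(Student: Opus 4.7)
The plan is to reduce the statement to the standard Bochner--Lichnerowicz computation for the square of a Dirac-type operator with a Clifford-compatible connection, and then read off how the perturbation $A$ enters.

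First, I would observe that the perturbation preserves the structural form of the operator. Since $D=-i\ga^\mu\nabla_\mu$ with $\nabla_\mu=\partial_\mu+\om_\mu$, writing $A=-i\ga^\mu a_\mu\Id_V$ gives
\begin{equation*}
D+A \,=\, -i\ga^\mu\bigl(\nabla_\mu+a_\mu\Id_V\bigr) \,=\, -i\ga^\mu\nabla^A_\mu,
\end{equation*}
with $\nabla^A_\mu=\nabla_\mu+a_\mu\Id_V$; this immediately identifies the local form of the new connection and yields $\om^A_\mu=\om_\mu+a_\mu$. Clifford compatibility of $\nabla$ is inherited by $\nabla^A$ since the scalar $a_\mu\Id_V$ commutes with every $\ga^\nu$, so the compatibility relation $\nabla^A_\mu(\ga^\nu)=\Ga^\nu_{\mu\rho}\ga^\rho$ still holds.

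Second, I would apply the standard Lichnerowicz argument---the one that yields $D^2=L(\nabla,E)$ in the first place---to $(D+A)^2=-\ga^\mu\ga^\nu\nabla^A_\mu\nabla^A_\nu$ (after carrying $\ga^\nu$ past $\nabla^A_\mu$ using the compatibility relation). Splitting $\ga^\mu\ga^\nu=g^{\mu\nu}\Id_V+\tfrac12[\ga^\mu,\ga^\nu]$, the symmetric piece rebuilds the Bochner Laplacian $-\Tr_g(\nabla^A)^2=-g^{\mu\nu}\bigl(\nabla^A_\mu\nabla^A_\nu-\Ga^\rho_{\mu\nu}\nabla^A_\rho\bigr)$, while the antisymmetric piece collapses into $-\tfrac12\ga^\mu\ga^\nu[\nabla^A_\mu,\nabla^A_\nu]$. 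This gives $(D+A)^2=L(\nabla^A,E^A)$ with $E^A=\tfrac12\ga^\mu\ga^\nu[\nabla^A_\mu,\nabla^A_\nu]$, and specializing to $A=0$ reproduces $E=\tfrac12\ga^\mu\ga^\nu[\nabla_\mu,\nabla_\nu]$.

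Third, since $a_\mu\Id_V$ is central in $L(V)$, the curvature of $\nabla^A$ is simply
\begin{equation*}
\Omega^A_{\mu\nu}=[\nabla^A_\mu,\nabla^A_\nu]=[\nabla_\mu,\nabla_\nu]+\bigl(\partial_\mu a_\nu-\partial_\nu a_\mu\bigr)\Id_V=\Omega_{\mu\nu}+F_{\mu\nu}\Id_V,
\end{equation*}
so
\begin{equation*}
E^A-E \,=\, \tfrac12\ga^\mu\ga^\nu F_{\mu\nu} \,=\, \tfrac14[\ga^\mu,\ga^\nu]F_{\mu\nu},
\end{equation*}
where the antisymmetry of $F_{\mu\nu}$ kills the $g^{\mu\nu}\Id_V$ contribution. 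The trace identity then drops out from $\Tr[\ga^\mu,\ga^\nu]=0$ (trace of a commutator in $L(V)$), yielding $\Tr E^A=\Tr E$.

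The only delicate step is the second one: one must check that the symmetrized piece $g^{\mu\nu}\ga^\mu\ga^\nu\nabla^A_\mu\nabla^A_\nu$ (together with the reorderings forced by $\nabla^A_\mu(\ga^\nu)=\Ga^\nu_{\mu\rho}\ga^\rho$) really assembles into the canonical $-\Tr_g(\nabla^A)^2$, matching the Christoffel prescription of \eqref{omeganu}. This is the classical Lichnerowicz computation for the unperturbed $D$, and the whole point of the first step is that it transfers verbatim to $\nabla^A$ since $a_\mu\Id_V$ commutes with Clifford multiplication; everything else is algebra.
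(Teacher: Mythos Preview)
Your proof is correct and follows essentially the same route as the paper: both write $D+A=-i\ga^\mu\nabla^A_\mu$ with $\nabla^A_\mu=\nabla_\mu+a_\mu\Id_V$, expand $-(D+A)^2=\ga^\mu\nabla^A_\mu\ga^\nu\nabla^A_\nu$ using the commutator $[\nabla^A_\mu,\ga^\nu]=[\nabla_\mu,\ga^\nu]$ and the symmetric/antisymmetric split of $\ga^\mu\ga^\nu$, identify the Bochner Laplacian of $\nabla^A$ in the symmetric piece (via the symmetry $\Ga^\rho_{\mu\nu}=\Ga^\rho_{\nu\mu}$), and read off $E^A=\tfrac12\ga^\mu\ga^\nu[\nabla^A_\mu,\nabla^A_\nu]$. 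The paper is slightly more explicit about the Christoffel cancellation in the step you call ``the classical Lichnerowicz computation'', but the argument is the same.
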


\begin{proof}
This is quoted in \cite[equation (3.27)]{V}.

$(D+A)^2=L(\nabla^A,E^A):=-g^{\mu\nu}(\nabla^A_\mu \nabla^A_\nu-\Gamma_{\mu\nu}^\rho \nabla^A_\rho)-E^A$ and we get with $\nabla^A_\mu:=\nabla_\mu+a_\mu \text{Id}_V$:
\begin{align}
-(D+A)^2&=\ga^\mu\nabla^A_\mu\ga^\nu\nabla^A_\nu=\ga^\mu[\nabla^A_\mu,\ga^\nu]\nabla^A_\nu+\ga^\mu\ga^\nu \nabla^A_\mu \nabla^A_\nu \nonumber \\
&=\ga^\mu[\nabla_\mu,\ga^\nu]\nabla^A_\nu+\tfrac{1}{2}(\ga^\mu\ga^\nu+\ga^\nu\ga^\nu)\nabla^A_\mu\nabla^A_\nu + \tfrac{1}{2}\ga^\mu \ga^\nu[\nabla^A_\mu,\nabla^A_\nu] \nonumber \\
&=-\ga^\mu\ga^\rho{\Gamma_{\mu \rho}}^\nu \nabla^A_\nu+g^{\mu
\nu}\nabla^A_\mu\nabla^A_\nu
+\tfrac{1}{2}\ga^\mu \ga^\nu[\nabla_\mu +a_\mu Id_V,\nabla_\nu +a_\nu \text{Id}_V]. \label{nabla}
\end{align}
Since $\Gamma_{\mu\nu}^\rho=\Gamma_{\nu\mu}^\rho$, we get by comparison, 
\begin{align*}
E^A&=\tfrac{1}{2}\ga^\mu \ga^\nu[\nabla_\mu +a_\mu \,\text{Id}_V,\nabla_\nu
+a_\nu \,\text{Id}_V]=\tfrac{1}{2}\ga^\mu \ga^\nu \big([\nabla_\mu ,\nabla_\nu
]+\partial_\mu(a_\nu)-\partial_\nu(a_\mu)\big) \\
&=\tfrac{1}{2}\ga^\mu \ga^\nu [\nabla_\mu ,\nabla_\nu
] +\tfrac{1}{4}[\ga^\mu, \ga^\nu] \big(\partial_\mu(a_\nu)-\partial_\nu(a_\mu)\big).
\tag*{\qed}
\end{align*}
\hideqed
\end{proof}
Remark that even if quadratic terms in $A^2$ appear in the local presentation of the perturbation $D^2\to (D+A)^2$ (in the $b$ term), these terms do not appear in the invariant formulation $(\nabla,E)$
since they are hidden in $\nabla^A_\mu\nabla^A_\nu$ of \eqref{nabla}. 

\medskip

In the following, $D$ and $A$ are fixed and satisfy the hypothesis of Lemma \ref{Perturbation}. Indices $i$, $j$, $k$, and $l$ range from 1 through
the dimension $d$ of the manifold and index a local orthonormal frame
$\{ e_1,...,e_d\}$ for the tangent bundle. Roman indices
$a$, $b$, $c$, range from 1 through $d-1$ and index a local orthonormal
frame for the tangent bundle of the boundary
$\partial M$. The vector field $e_d$ is chosen to be the inward-pointing unit normal
vector field. Greek indices are associated to coordinate frames. 

Let $R_{ijkl}$, $\rho_{ij}:=R_{ikkj}$ and $\tau:=\rho_{ii}$ be respectively the components of the Riemann tensor, Ricci tensor and scalar curvature of the Levi-Civita connection. 
Let $L_{ab}:=(\nabla_{e_{a}}e_{b},e_{d})$ be the second fundamental form of the hypersurface $\partial M$ in $M$. 
Let
``;''  denote multiple covariant differentiations with respect to $\nabla^{A}$ and 
``:'' denote multiple covariant differentiations with respect to $\nabla$ and the Levi-Civita
connection of $M$.

We will look at a chiral boundary condition. This is a mixed boundary condition natural to consider in order to preserve the existence of chirality on $M$ and its boundary $\del M$ which are compatible with the (selfadjoint) Clifford action: we assume that the operator $\chi$ is selfadjoint and satisfies the following relations:
\begin{align}
\{\chi,\ga^d\} = 0 \,, \qquad  [\chi, \ga^a] = 0 \label{chigamma},\, \forall a \in \set{1,\cdots,d-1}\,.
\end{align}

This condition was shown in \cite{CC2} a natural assumption to enforce the hermiticity of the realization of the Dirac operator. It is known \cite[Lemma 1.5.3]{Gilkey2} that ellipticity is preserved.

Since $\ga^d$ is invertible, $\dim V_+ = \dim V_-$ and $\Tr\chi =0$. 

For an even-dimensional oriented manifold, there is a natural candidate $\chi$ satisfying \eqref{chigamma}, namely 
$$
\chi:={\chi_{}}_{\pa M}=(-i)^{d/2-1} \ga(e_1)\cdots \ga(e_{d-1}) \, .
$$
This notation is compatible with \eqref{chi}.
Recall that 
\begin{align}
\Tr ( \ga^{i_1} \cdots \ga^{i_{2k+1}} ) = 0 \, ,\,\, \forall k \in \N, \quad\Tr (\ga^{i} \ga^{j} ) = \dim V \, \delta ^{ij}\, . \label{tracegammaimpair}
\end{align}

The natural realization of this boundary condition for the Dirac type operator $D+A$ is the operator $(D+A)_\chi$ which acts as $D+A$ on the domain $\set{s\in C^\infty(V) \, : \, \Pi_- s_{|\del M} =0 }$. It turns out (see \cite[Lemma 7]{BG2}) that the natural boundary operator $B_\chi^A$ defined by 
$$
\B_\chi^A s:= \Pi_- (D+A) s_{|\del M} \oplus \Pi_{-}s_{|\del M}\, 
$$  
is a boundary operator of the form (\ref{Mixed}) provided that 
$S = \half \Pi_+ (-i[\ga^d,A]-L_{aa}\chi)\Pi_+$. 

\begin{lemma}
Actually, $S$ and $\chi_{;a}$ are independent of the perturbation $A$:
\label{Schi}

(i) $S = -\half L_{aa}\, \Pi_+$\,.

(ii) $\chi_{;a} = \chi_{:a}$.
\end{lemma}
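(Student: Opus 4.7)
My plan is to prove both parts by direct computation, exploiting the chirality relations \eqref{chigamma} and the explicit form of the perturbed connection from Lemma \ref{Perturbation}.

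For part (i), the starting point is the formula $S = \half \Pi_+ (-i[\ga^d,A]-L_{aa}\chi)\Pi_+$ quoted just above the lemma, together with the local expression $A = -i\ga^\mu a_\mu$. The first step is to compute the commutator: since $\ga^d$ anticommutes with $\ga^a$ for $a<d$ and commutes with itself, one gets $-i[\ga^d,A] = -2\ga^d\ga^a a_a$, summed over $a=1,\dots,d-1$. The crucial observation is then that each $\ga^d\ga^a$ anticommutes with $\chi$: indeed $\chi \ga^d\ga^a = -\ga^d\chi\ga^a = -\ga^d\ga^a\chi$ using both relations of \eqref{chigamma}. Consequently $\Pi_+\,\ga^d\ga^a = \ga^d\ga^a\,\Pi_-$, and sandwiching with $\Pi_+$ on the right kills the $A$-dependent contribution because $\Pi_-\Pi_+=0$. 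The second step handles the remaining term using $\chi\Pi_+=\Pi_+$ (since $\chi^2=\Id$), which gives $\Pi_+\chi\Pi_+=\Pi_+$ and yields $S=-\tfrac{1}{2}L_{aa}\Pi_+$.

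For part (ii), the plan is even shorter: by Lemma \ref{Perturbation} the perturbed connection on $V$ is $\nabla^A_\mu = \nabla_\mu + a_\mu \Id_V$. When acting on an endomorphism $\chi$ of $V$, the extra term $a_\mu \Id_V$ commutes with $\chi$ (as $a_\mu$ is a scalar function), so the induced covariant derivatives on $\End(V)$ agree: $\chi_{;a} = [\nabla^A_a,\chi] = [\nabla_a,\chi] = \chi_{:a}$.

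There is no real obstacle here; both statements are algebraic consequences of \eqref{chigamma} and the fact that the $A$-perturbation of the connection is a scalar shift, and the only point requiring care is tracking which terms survive projection by $\Pi_+$. The conceptual content is that the chiral boundary condition is insensitive to the internal fluctuation of the Dirac operator at the level of the data $(S,\chi_{;a})$ entering the heat-kernel coefficients, which is precisely what allows us later to track tadpole-type contributions independently of $A$.
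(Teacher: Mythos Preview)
Your proof is correct and follows essentially the same approach as the paper's: for (i) you show that the $A$-dependent term anticommutes with $\chi$ (the paper does this at the level of $[\ga^d,A]$ rather than expanding to $\ga^d\ga^a$, but the argument is identical), so it is killed by the projection $\Pi_+(\cdot)\Pi_+$; for (ii) both you and the paper observe that the scalar shift $a_\mu\Id_V$ drops out of the commutator $[\nabla^A_a,\chi]$.
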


\begin{proof} 
$(i)$ Since $A$ is locally of the form $-i \ga^j a_j$ with $a_j\in C^\infty(U)$,  we obtain from (\ref{chigamma}), 
\begin{align*}
\chi[\ga^d,A] = -i a_j \, \chi[\ga^d,\ga^j] = -i \sum_{j<d} a_j \,\chi[\ga^d,\ga^j] = i\sum_{j<d} a_j \,[\ga^d,\ga^j]\chi = -[\ga^d,A] \chi
\end{align*}
and the result as a consequence of $\Pi_+\,[\ga^d,A]=[\ga^d,A]\,\Pi_-$ and $\Pi_+\Pi_-=0$.

$(ii)$ We have $\nabla^A_{i} = \nabla_i+a_i \Id_V$ where $A=:-i\ga^j a_j$, and since  
$(\nabla^A_i \chi) s = \nabla^A_i (\chi s) - \chi( \nabla^A_i s)$ 
for any $s\in C^\infty(V)$, using Lemma \ref{Perturbation}, $\nabla^A_i (\chi) =[\nabla_i + a_i \Id_V, \chi]=[\nabla_i,\chi] = \nabla_i (\chi)$.
\end{proof}

While $S$ is not sensitive to the perturbation $A$, the boundary operator $\B_\chi^A$ depends a priori on $A$. We shall denote $\B_\chi$ the boundary operator $\B_\chi^A$ when $A=0$.

The coefficients $a_{d-k}$ for $0\leq k\leq 4$ have been computed in \cite{BG1} for general mixed boundary conditions in the case of Laplace type operators and in \cite[Lemma 8]{BG2} for Dirac type operators with chiral boundary conditions. We recall here these coefficients in our setting:

\begin{prop}
\label{ThmGilkey}\begin{align*}
& a_{d}(D+A,\B_\chi^A)=(4\pi)^{-d/2}\,\int_M \Tr_V 1 \, dx \, , \\
& a_{d-1}(D+A,\B_\chi^A)=0 \, ,  \\
& a_{d-2}(D+A,\B_\chi^A)=\tfrac{(4\pi)^{-d/2}}{6}\big\{
     \int_M \Tr_V(6E^A+\tau) \,dx + \int_{\del M}\Tr_V(2L_{ aa}+12S)\, dy\,\big\}, \\
& a_{d-3}(D+A,\B_\chi^A)=\tfrac{(4 \pi )^{
       -(d-1)/2}}{384}  \int_{\del M}\Tr_V\big\{96 \chi E^A + 3 L_{aa}^2 + 6 L_{ab}^2 
+ 96 S L_{aa} + 192 S^2 -12 \chi_{;a}^2 \} \, dy ,\\
& a_{d-4}(D+A,\B_\chi^A)=\tfrac{(4 \pi )^{-d/2}}{360} \big\{\int_M \Tr_V 
      \big \{ 60\tau E^A + 180 (E^A)^2 +30 (\Omega_{ij}^{A})^2 + 5\tau^2 -2\rho^2+2R^2 \big\} \, dx \\
      &\hspace{4.3cm} + \int_{\del M} \Tr_V \big\{ 180 \chi E^A_{;d}+120E^A L_{aa} +720 S E^A+60 \chi \chi_{;a} \Omega^A_{ad} +T\big \} \, dy\, \big\}.
\end{align*}
where 
\begin{align*}
&T:=20 \tau L_{aa}+  4 R_{adad} L_{bb} -12 R_{adbd}L_{ab}+4R_{abcb}L_{ac} + \tfrac{1}{21}\big(160 L_{aa}^3-48L_{ab}^2 L_{cc}+272 L_{ab}L_{bc}L_{ac}\\
&\hspace{1cm} +120 \tau S+144 S L_{aa}^2+48 S L_{ab}^2+480 (S^2 L_{aa}+S^3)-42\chi_{;a}^2 L_{bb}+6\chi_{;a}\chi_{;b}L_{ab}-120 \chi_{;a}^2 S\big)
\end{align*}
is independent of $A$.
\end{prop}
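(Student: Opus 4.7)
The plan is to apply the Branson--Gilkey heat kernel coefficient formulas for a Laplace type operator $P = L(\nabla, E)$ with mixed boundary conditions $\B(\chi, S)$ to the pair $\big((D+A)^2, \B_\chi^A\big)$. By Lemma \ref{Perturbation}, $(D+A)^2$ is indeed of Laplace type, and the preceding discussion identifies $\B_\chi^A$ as a mixed boundary operator of the form \eqref{Mixed} with the specific endomorphism $S$ computed there. Thus the general formulas from \cite[Theorem 3.3 and Theorem 4.1]{BG1} apply, and Proposition \ref{ThmGilkey} amounts to making the $A$-dependence explicit and checking that the remaining structure on the right-hand side takes the stated form.

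First I would treat the low-order coefficients. The formula for $a_d$ depends only on the volume and $\Tr_V 1$, so the perturbation plays no role. The vanishing of $a_{d-1}$ is a general consequence of chiral boundary conditions (see \cite[Lemma 8]{BG2}). For $a_{d-2}$ and $a_{d-3}$ I would substitute into the Branson--Gilkey expressions the formulas of Lemma \ref{Perturbation} ($\nabla^A_\mu = \nabla_\mu + a_\mu$, $E^A = E + \tfrac14[\gamma^\mu,\gamma^\nu]F_{\mu\nu}$, $\Omega^A = \Omega + F$) together with Lemma \ref{Schi}, which guarantees that $S = -\tfrac12 L_{aa}\Pi_+$ and $\chi_{;a} = \chi_{:a}$ are independent of $A$. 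The chiral compatibility conditions \eqref{chigamma} combined with the trace identities \eqref{tracegammaimpair} then collapse many terms from the general Branson--Gilkey list, giving the stated expressions.

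The main technical step is the computation of $a_{d-4}$, since the general \cite{BG1} formula contains a large number of boundary contributions built from $E$, $\Omega$, $S$, $\chi_{;a}$, $L_{ab}$, and the curvature tensor, and several of these must be paired off using Green's identities on $\partial M$ in order to match the stated form. The strategy is to isolate the terms that survive the chiral projection: (i) the volume integrand $60\tau E^A + 180 (E^A)^2 + 30 (\Omega^A_{ij})^2 + 5\tau^2 - 2\rho^2 + 2R^2$ comes directly from the interior Seeley--DeWitt coefficient for $L(\nabla^A, E^A)$; (ii) the boundary terms linear in $E^A$, $\Omega^A_{ad}$ assemble into $180\chi E^A_{;d} + 120 E^A L_{aa} + 720 S E^A + 60 \chi\chi_{;a}\Omega^A_{ad}$ after using $\{\chi, \gamma^d\} = 0$ and $[\chi,\gamma^a]=0$ to cancel the terms with the opposite chirality; (iii) the remaining purely geometric contributions combine into the expression $T$.

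The main obstacle I expect is verifying that $T$ is genuinely $A$-independent, i.e.\ that no residual $F_{\mu\nu}$ term slips into it. Since every explicit $A$-dependence enters only through $E^A$, $\Omega^A_{ad}$, $\nabla^A$ acting on $E^A$, and $S$, and since $S$ and $\chi_{;a}$ are unaffected by $A$ by Lemma \ref{Schi}, such terms can only appear hidden in the covariant derivative $E^A_{;d} = E^A_{:d} + [a_d, E^A]$ or in $\Omega^A_{ad}$. These are already displayed outside $T$ in the statement, so the verification reduces to checking that the Branson--Gilkey formula does not produce additional $A$-dependent contributions once the chiral trace identities \eqref{chigamma}, \eqref{tracegammaimpair} are applied; this is where careful bookkeeping of the fermionic trace of products of Clifford elements with $\chi$ is essential.
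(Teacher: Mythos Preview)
Your approach is correct and matches the paper's own treatment: the paper does not give a proof of this proposition either, but simply states it as a recall of the heat kernel coefficient formulas computed in \cite{BG1} for general mixed boundary conditions and specialized to the chiral case in \cite[Lemma 8]{BG2}. Your plan to substitute the data $(\nabla^A, E^A, \Omega^A, S, \chi_{;a})$ from Lemmas \ref{Perturbation} and \ref{Schi} into the Branson--Gilkey formulas is exactly how one reads the statement ``in our setting'', and your observation that the $A$-independence of $T$ follows from that of $S$ and $\chi_{;a}$ is precisely the content of Lemma \ref{Schi}.
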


The following proposition shows that there are no tadpoles up to order 5 in manifolds endowed with a chiral boundary condition.
\begin{theorem}
Let $M$ be an even $d$-dimensional compact oriented spin Riemannian manifold with smooth boundary $\del M$ and spin bundle $V$. Let $D:=-i\ga^j\nabla_j $ be the classical Dirac operator, and $\chi={\chi_{}}_{\pa M}=(-i)^{d/2-1} \ga(e_1)\cdots \ga(e_{d-1})$ where $(e_i)_{1\leq i\leq d}$ is a local orthonormal frame of $TM$.
 
The perturbation $D\to D+A$ where $A=-i\ga^j a_j$ is a 1-form for $D$, induces, under the chiral boundary condition, the following perturbations on the heat kernel coefficients where we set 
$c_{d-k}(A):=a_{d-k}(D+A,\B_\chi^A)-a_{d-k}(D,\B_{\chi})$:

(i) $c_{d}(A)=c_{d-1}(A)=c_{d-2}(A)=c_{d-3}(A) = 0$.

(ii) $c_{d-4}(A) =- \tfrac{1}{6(2\pi)^{d/2}}\int_M F_{\mu\nu}F^{\mu\nu} \,dx.$

In other words, the coefficients $a_{d-k}$ for $0\leq k\leq 3$ are unperturbed, $a_{d-4}$ is only perturbed by quadratic terms in $A$ and there are no linear terms in $A$ in $a_{d-k}(D+A,\B_\chi^A)$ for $k\leq 5$.
\end{theorem}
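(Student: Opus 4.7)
The strategy is to substitute the formulas of Proposition \ref{ThmGilkey} into $c_{d-k}(A) = a_{d-k}(D+A,\B_\chi^A) - a_{d-k}(D,\B_\chi)$ and track the $A$-dependence using Lemmas \ref{Perturbation} and \ref{Schi}. The only geometric objects that change under $A$ are $E^A - E = \tfrac{1}{4}[\ga^\mu,\ga^\nu]F_{\mu\nu}$ and $\Omega^A_{ij} - \Omega_{ij} = F_{ij}\Id_V$; all quantities involving $S$, $\chi_{;a}$, $\tau$, $L_{ab}$, or purely Riemannian curvature are inert. The work then reduces to computing fibrewise traces of products of $\ga$-matrices against these two perturbations, using the standard trace identities \eqref{tracegammaimpair} together with the chirality relations \eqref{chigamma}.

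For (i), each case is immediate. The coefficient $a_d$ only sees the volume, $a_{d-1}$ vanishes, and $a_{d-2}$ changes only through the term $\Tr_V E^A$, which equals $\Tr_V E$ since $\Tr_V[\ga^\mu,\ga^\nu]=0$. For $a_{d-3}$ the only $A$-sensitive term is $96\,\Tr_V(\chi E^A)$, and the difference is proportional to $\Tr_V(\chi[\ga^\mu,\ga^\nu])F_{\mu\nu}$; since $\chi$ is a product of $d-1$ gammas and $d$ is even, this is the trace of an odd product of $\ga$-matrices, hence zero.

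The main computation is (ii). In $a_{d-4}$, one scans the bulk integrand $60\tau E^A + 180(E^A)^2 + 30(\Omega^A_{ij})^2 + (\text{Riemann only})$ and the boundary integrand. Every boundary term involving the $A$-perturbation vanishes: $\Tr_V(\chi[\ga^\mu,\ga^\nu]) = 0$ and $\Tr_V([\ga^\mu,\ga^\nu])=0$ kill the $\chi E^A_{;d}$, $E^A L_{aa}$ and $SE^A$ contributions (writing $\Pi_+=\tfrac12(1+\chi)$ in the latter), while $\chi^2 = \Id_V$ forces $\Tr_V(\chi\chi_{;a}) = -\Tr_V(\chi\chi_{;a}) = 0$ and eliminates the $\chi\chi_{;a}\Omega^A_{ad}$ term. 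In the bulk, the cross term $\Tr_V(E[\ga^\mu,\ga^\nu])$ vanishes because Lichnerowicz gives $E = -\tfrac{\tau}{4}\Id_V$, and $\Tr_V(\Omega_{ij})=0$ by antisymmetry of the curvature in its frame indices, so the surviving contribution is the pure quadratic part
\[
180\cdot\tfrac{1}{16}\Tr_V\!\big(([\ga^\mu,\ga^\nu]F_{\mu\nu})^2\big) + 30\dim V\, F_{ij}F^{ij}.
\]
Using antisymmetry of $F$ to write $[\ga^\mu,\ga^\nu]F_{\mu\nu} = 2\ga^\mu\ga^\nu F_{\mu\nu}$ and the standard identity $\Tr_V(\ga^\mu\ga^\nu\ga^\rho\ga^\sigma) = \dim V(g^{\mu\nu}g^{\rho\sigma} - g^{\mu\rho}g^{\nu\sigma} + g^{\mu\sigma}g^{\nu\rho})$, one gets $-8\dim V\, F_{\mu\nu}F^{\mu\nu}$ for the first trace, so the bulk difference reduces to $-60\dim V\, F_{\mu\nu}F^{\mu\nu}$. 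Dividing by $360$, multiplying by $(4\pi)^{-d/2}$ and inserting $\dim V = 2^{d/2}$ yields exactly $-\tfrac{1}{6(2\pi)^{d/2}}\int_M F_{\mu\nu}F^{\mu\nu}\,dx$.

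The claim about the absence of linear terms through $k=5$ follows from the same trace bookkeeping: any term linear in $A$ in the Branson--Gilkey formula for $a_{d-5}$ must involve $\Tr_V$ of a Clifford element multiplied by $[\ga^\mu,\ga^\nu]F_{\mu\nu}$ or by $F_{ij}\Id_V$, and in each case one is left with either $\Tr_V[\ga^\mu,\ga^\nu]$, a trace of an odd number of $\ga$-matrices, or $\Tr_V(\chi\chi_{;a}(\cdots))$ which one handles through $\chi^2 = \Id_V$. The main obstacle is purely notational: identifying unambiguously which $A$-linear boundary invariants can actually appear in $a_{d-5}$ under the chiral condition and verifying that each one falls into one of these vanishing classes; once that list is drawn up, no new analytic input beyond the Clifford trace identities already used for $a_{d-4}$ is required.
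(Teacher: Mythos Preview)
Your proof is correct and follows essentially the same route as the paper: substitute Proposition~\ref{ThmGilkey}, invoke Lemmas~\ref{Perturbation} and~\ref{Schi} so that only $E^A-E=\tfrac14[\ga^\mu,\ga^\nu]F_{\mu\nu}$ and $\Omega^A_{ij}-\Omega_{ij}=F_{ij}\Id_V$ carry the $A$-dependence, and then kill every term by Clifford trace identities, with the bulk quadratic computation giving $-60\dim V\,F_{\mu\nu}F^{\mu\nu}$ exactly as you wrote. One small but pleasant difference: for the boundary term $\chi\chi_{;a}\Omega^A_{ad}$ you use the algebraic observation that $\chi^2=\Id_V$ forces $\{\chi,\chi_{;a}\}=0$ and hence $\Tr_V(\chi\chi_{;a})=0$ by cyclicity, whereas the paper computes $\chi\chi_{:a}=-\Ga_{ad}^{j}\ga^j\ga^d$ explicitly via $[\nabla_a,\chi_M]=0$; your argument is shorter and avoids invoking the full chirality operator. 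Conversely, your treatment of $\chi E^A_{;d}$ is slightly glossed: the difference $E^A_{;d}-E_{:d}$ contains not only $\tfrac14[\ga^i,\ga^j]\,\partial_d F_{ij}$ but also $\tfrac14[\nabla_d,[\ga^i,\ga^j]]F_{ij}$, and the latter is not literally of the form $[\ga^\mu,\ga^\nu]\cdot(\text{scalar})$; however, since $[\nabla_d,\ga^i]=\ga(\nabla_d e_i)$ it is still a sum of two-gamma terms, so $\chi$ times it is a product of $d+1$ gammas and the odd-trace argument you rely on still applies.
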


\begin{remark} When $A$ is selfadjoint, all coefficients $a_{d-k}(D+A,\B_\chi^A)$ and $a_{d-k}(D,\B_{\chi})$ are real while linear contributions in $A$ are purely imaginary, modulo traces of $\ga$ and $\chi$ matrices and their covariant derivatives. Since the invariant terms appearing as integrands of $\int_{M}$ and $\int_{\del M}$ in the coefficients at higher order are polynomial in $S$, $\chi$, $R$, $E^A$ and $\Omega^A$, and their covariant derivatives, one expects no linear terms in $A$ at any order.
\end{remark}

\begin{proof}
$(i)$ The fact that $c_{d}(A)=c_{d-1}(A)=0$ follows from Proposition \ref{ThmGilkey}. 

Since by Lemma \ref{Schi}, $c_{d-2}(A) = (4\pi)^{-d/2}\int_M \Tr_V(E^A-E_A)\,  dx$, we get $c_{d-2}(A)=0$ because $\Tr_V E^A =\Tr_V E$ by Lemma \ref{Perturbation}. 
     
From Proposition \ref{ThmGilkey} and Lemma \ref{Schi}, we get
$c_{d-3}(A)= \tfrac{1}{4}(4 \pi )^{-(d-1)/2} \int_{\del M}\Tr_V\big\{ \chi (E^A-E)\big\}$.

Since $\chi(E^A-E) = (-i)^{d/2} \ga^1\cdots\ga^{d-1} [\ga^j,\ga^k]F_{jk}$, (\ref{tracegammaimpair}) yields  $\Tr_V \chi(E^A-E)=0$ because $d$ is even.

$(ii)$ Since $\Tr_V(E^A-E)=0$ and $\Tr_V \chi(E^A-E) =0$, we obtain $\Tr_V S (E^A-E)=0$ from Lemma \ref{Schi}. Thus, using Proposition \ref{ThmGilkey} and Lemma \ref{Schi},
\begin{align*}
&c_{d-4}(A)=\tfrac{(4 \pi )^{-d/2} }{360}\big\{\int_M \Tr_V
      \big \{ 180( (E^A)^2 -E^2) +30 \big((\Omega_{ij}^{A})^2-(\Omega_{ij})^2\big) \big\} \, dx \\
      &\hspace{4cm} + \int_{\del M} \Tr_V \big\{ 180 \chi( E^A_{;d}-E_{:d})+60 \chi \chi_{;a} (\Omega^A_{ad}-\Omega_{ad}) \} \, dy\, \big\}.
\end{align*}
We obtain locally $\Tr_V \big((E^A)^2-E^2\big) = \tfrac{1}{16}\Tr([\ga^\mu,\ga^\nu][\ga^\rho,\ga^\sg])F_{\mu\nu}F_{\rho\sg}$ using Lichn\'erowicz' formula $E=-\tfrac{1}{4}\tau$. 
Since $\Tr_V([\ga^\mu,\ga^\nu][\ga^\rho,\ga^\sg])=4. 2^{d/2}(g^{\mu\sg}g^{\nu\rho}-g^{\mu\rho}g^{\nu\sg})$, 
\begin{align*}
\Tr_V \big((E^A)^2-E^2\big) = -\,2^{d/2-1} F_{\mu\nu}F^{\mu\nu}.
\end{align*}
$\nabla$ being the spin connection associated to the spin structure of $M$, we have $\Omega_{ij}=\tfrac{1}{4}\ga^{k}\ga^l R_{ijkl}$. So $R_{ijkl}=-R_{ijlk}$ implies $\Tr_V \Omega_{ij}=0$. Hence, with Lemma \ref{Perturbation}, 
\begin{align*}
\Tr_V \big((\Omega_{ij}^A)^2- \Omega_{ij}^2\big) =2^{d/2}  F_{ij}^2 =2^{d/2} F_{\mu\nu}F^{\mu\nu}\, .
\end{align*}
Moreover, $E^A_{;d} = [\nabla_d+a_d,E^A] = [\nabla_d,E+\tfrac{1}{4}[\ga^i,\ga^j]F_{ij}] = E_{:d} + \tfrac{1}{4}[\nabla_d,[\ga^i,\ga^j]] F_{ij}$. 

Using $[\nabla_i,\ga^i]=\ga(\nabla_i e_j)$ and (\ref{tracegammaimpair}),
$$
\Tr_V \big( \chi( E^A_{;d}-E_{:d}) \big)= (-i)^{d/2} \tfrac{1}{2}\, F_{ij} \Tr_V \big\{ \ga^1\cdots \ga^{d-1} \big(\ga(\nabla_d e_i) \ga^j+\ga^i \ga(\nabla_d e_j)\big) \big\} =0  \, .
$$ 
It remains to check that $\Tr_V \big(\chi \chi_{:a} (\Omega^A_{ad}-\Omega_{ad})\big)=0$. 
Let $\chi_M=-i\chi \ga^d$ be the grading operator (see \eqref{chi}). Since $\chi_{M}$ commutes with the spin connection operator $\nabla$ (see \cite[p. 396]{Polaris}),
$$
0=[\nabla_a,\chi_M]=[\nabla_a,\chi\ga^d] = \chi_{:a}\ga^d +\chi[\nabla_a,\ga^d]=\chi_{:a}\ga^d +\chi\ga(\nabla_a e_d)
$$
and thus $\chi\chi_{:a}= -\ga(\nabla_a e_d)\ga^d  = -\Ga_{ad}^j \ga^j \ga^d$, where 
$\Ga_{ad}^j=-\Ga_{aj}^{d}$ since $(e_j)$ is an orthonormal frame. 
So $\Tr_V(\chi \chi_{:a}) = -\Ga_{ad}^{j} \delta^{jd}=-\Ga_{ad}^{d}=0$. 
Finally, the result on $c_{d-4}$ follows from  Lemma \ref{Perturbation} as $\Tr_V\big(\chi \chi_{:a} (\Omega^A_{ad}-\Omega_{ad})\big)=\Tr_V(\chi \chi_{:a})F_{ad}$.

The coefficient $a_{d-5}(D+A,\B_\chi^A)$ is computed in \cite{BGKV}. One can check directly as above that linear terms in $A$ are not present. The computation uses the fact that the trace of the following terms $\chi E^A_{;dd}$, $E^A_{;d}S$, $\chi (E^A)^2$, $E^A S^2$, $\chi_{;a}\chi_{;b} \Omega^A_{ab}$, $\chi_{;a}^2 E^A$, do not have linear terms in $A$.
\end{proof}

\medskip

In the following, we investigate the above conjecture with Chamseddine--Connes pseudodifferential calculus applied to compact spin manifolds without boundary and Riemannian spectral triples. We also see, using Wodzicki residue, how to compute some noncommutative integrals in this setting.

\section{Commutative spectral triples} \label{commtadsection}

\subsection {Commutative geometry}

\begin{definition}
\label{rieman}
Consider a commutative spectral triple given by a compact
Riemannian spin manifold  $M$ of dimension $d$ without boundary and
its Dirac operator $\DD$ associated to the Levi-Civita connection.
This means $\big( \A:=C^{\infty}(M),\, \H:=L^2(M,S),\,\DD \big)$
where $S$
is the spinor bundle over $M$. This triple is real since, due to the
existence of a spin structure, the charge conjugation operator
generates an anti-linear isometry $J$ on $\H$ such that
$$
JaJ^{-1}=a^*,\quad \forall a \in \A,
$$
and when $d$ is even, the grading is given by the chirality matrix
\begin{align}
\label{chi}
\chi_M:=(-i)^{d/2}\,\ga^1\ga^2\cdots\ga^d.
\end{align}

Such a triple is said to be a commutative geometry (see \cite{CGravity} and \cite{CReconstruction} for the role of $J$ in the nuance between spin and spin$^c$ manifold).
\end{definition}
Since $JaJ^{-1}=a^*$ for $a\in\A$, we get that in a commutative geometry, 
\begin{align}
\label{JAJ}
JAJ^{-1}=-\epsilon\,A^* , \quad \forall A\in \Omega^1_\DD(\A).
\end{align}

\subsection{No tadpoles}

The appearance of tadpoles never occurs in commutative geometries, as
quoted in \cite[Lemma 1.145]{ConnesMarcolli} for the dimension $d=4$. 
This fact means that a given geometry $(\A,\, \H,\,\DD)$ is a
critical point for the spectral action \eqref{formuleaction}.

\begin{theorem} 
\label{proptadpoles} 
There are no tadpoles on a commutative geometry, namely, for any one-form $A=A^*\in \Omega_{\DD}^1(\A)$, $Tad_{\DD+A}(k)=0$, for any $k \in \Z$,
$k\leq d$.
\end{theorem}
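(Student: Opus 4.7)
The theorem will follow essentially immediately from Corollary \ref{Atilde=0} once the identity \eqref{JAJ}, recorded earlier for commutative geometries, is verified. So the plan is very short: establish $\wt A = 0$ for every selfadjoint one-form in a commutative geometry, and quote the corollary.

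First I would unwind \eqref{JAJ}. For a one-form $A = \sum_i a_i [\DD, b_i]$ with $a_i, b_i \in \A = C^{\infty}(M)$, the relations $JaJ^{-1} = a^*$ on $\A$ and $J\DD J^{-1} = \epsilon \DD$ give
\[
JAJ^{-1} \;=\; \epsilon \sum_i a_i^{*}\,[\DD, b_i^{*}].
\]
The crucial point is then the first-order condition: in the commutative geometry $(C^{\infty}(M), L^2(M,S), \DD)$, the commutator $[\DD, b_i^{*}]$ is Clifford multiplication by the one-form $db_i^{*}$, hence acts fibrewise on $S$ and commutes with the pointwise multiplication operator $a_i^{*}$. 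Therefore $a_i^{*}[\DD,b_i^{*}] = [\DD,b_i^{*}]a_i^{*}$ and
\[
JAJ^{-1} \;=\; \epsilon \sum_i [\DD,b_i^{*}]\,a_i^{*} \;=\; -\epsilon\,A^{*}.
\]

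With \eqref{JAJ} in hand, the hypothesis $A=A^{*}$ forces
\[
\wt A \;=\; A + \epsilon\,JAJ^{-1} \;=\; A - A \;=\; 0.
\]
Corollary \ref{Atilde=0} then applies directly and gives $\Tad_{\DD+A}(k) = 0$ for every $k \in \Z$ with $k \leq d$, which is the statement of the theorem.

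The only substantive ingredient is the commutation $[a_i^{*},[\DD,b_i^{*}]] = 0$, i.e.\ the strict first-order condition in a commutative geometry; there is no pseudodifferential or Wodzicki-residue input needed at this stage. Everything else is a mechanical application of the formulas for $\Tad_{\DD+A}$ and $\Tad_{\DD+\wt A}$ established in Proposition \ref{valeurtadpole}, which have already been packaged into Corollary \ref{Atilde=0}. I therefore expect no genuine obstacle: the content of the theorem is really the combination of the commutative first-order condition with the symmetrization $A \mapsto \wt A$ built into the definition of the tadpole.
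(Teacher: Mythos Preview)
Your proof is correct and follows exactly the same route as the paper: show $\wt A = 0$ via \eqref{JAJ} and the selfadjointness of $A$, then invoke Corollary \ref{Atilde=0}. The only difference is that you spell out the verification of \eqref{JAJ} using the first-order condition $[a,[\DD,b]]=0$, which the paper simply states as a consequence of $JaJ^{-1}=a^*$ without writing out the details.
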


\begin{proof}
Since $\tilde A=0$ when $A=A^*$ by \eqref{JAJ}, the result follows from Corollary \ref{Atilde=0}.
\end{proof}

There are similar results in the following 

\begin{lemma}
\label{similar}
Under same hypothesis, for any $k,\,l\in \N$

(i) $\ncint A\,\DD^{-k}=-\epsilon^{k+1}\,\ncint A\,\DD^{-k}$,

(ii) $\ncint \chi A\,\DD^{-k}=-\epsilon^{k+1}\ncint\chi A \,\DD^{-k}$,

(iii) $\ncint A^l \vert \DD \vert^{-k}=(-\epsilon)^l\,\ncint {A^l}\vert \DD \vert^{-k}$,

(iv) $\ncint \chi A^l \vert \DD \vert^{-k}=(-\epsilon)^l\,\ncint {\chi A^l}\vert \DD \vert^{-k}$.
\end{lemma}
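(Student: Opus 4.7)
The plan is to exploit the $J$-equivariance of each integrand together with the reality of the integrals in question. In a commutative geometry, self-adjointness of $A$ combined with \eqref{JAJ} reduces $JAJ^{-1}=-\epsilon A^{*}$ to $JAJ^{-1}=-\epsilon A$, and one also has $J|\DD|^{k}J^{-1}=|\DD|^{k}$, $J\DD^{k}J^{-1}=\epsilon^{k}\DD^{k}$ (since $\epsilon=\pm1$), and $J\chi J^{-1}=\epsilon''\chi$ from the grading axiom. With these, each of the operators $T$ appearing under $\ncint$ in (i)--(iv) is an eigenvector of the conjugation $T\mapsto JTJ^{-1}$ with an explicit scalar eigenvalue $c\in\{\pm1\}$ determined by $\epsilon$, $\epsilon''$, $k$ and $l$. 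Lemma \ref{adjoint} then gives $\overline{\ncint T}=c\,\ncint T$, and Corollary \ref{reel} tells us that $\ncint T\in\R$, so the complex conjugation on the left drops out and we are left with $\ncint T=c\,\ncint T$, which is precisely the shape of each claimed identity.

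For (i) I would compute $J(A\DD^{-k})J^{-1}=(-\epsilon A)(\epsilon^{k}\DD^{-k})=-\epsilon^{k+1}\,A\DD^{-k}$; combined with the reality of $\ncint A\DD^{-k}$ this gives the stated identity in two lines. Claim (iii) follows from the analogous computation
\[
J\bigl(A^{l}|\DD|^{-k}\bigr)J^{-1}=(-\epsilon A)^{l}|\DD|^{-k}=(-\epsilon)^{l}A^{l}|\DD|^{-k},
\]
the $J$-invariance of $|\DD|^{k}$ ensuring that only the $l$ copies of $A$ contribute a sign, after which Lemma \ref{adjoint} and Corollary \ref{reel} close the argument.

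Statements (ii) and (iv) proceed by exactly the same mechanism after carrying the extra factor $J\chi J^{-1}=\epsilon''\chi$ through the conjugation, again using Corollary \ref{reel} to drop the complex conjugation produced by Lemma \ref{adjoint}. Here one has to combine the $J$-signs with the commutation $\chi A=-A\chi$, which comes from the fact that $\chi$ commutes with $\A$ and anticommutes with $\DD$, so anticommutes with any one-form. I expect this sign bookkeeping in the presence of $\chi$ to be the only real obstacle; no analytic input beyond Lemma \ref{adjoint} and Corollary \ref{reel} is needed, and the entire proof is a per-case algebraic computation of the scalar $c$ in $JTJ^{-1}=cT$ for each of the four families of integrands.
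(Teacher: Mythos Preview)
Your approach is essentially the paper's: both proofs hinge on the identity $\ncint JXJ^{-1}=\overline{\ncint X}$ from Lemma~\ref{adjoint} and a computation of $JTJ^{-1}$ for each integrand. The only difference is in how you close the loop. You substitute $A=A^*$ immediately so that $JTJ^{-1}=cT$ for a sign $c$, and then invoke Corollary~\ref{reel} to replace $\overline{\ncint T}$ by $\ncint T$. The paper instead keeps $A^*$ and uses the \emph{first} identity in Lemma~\ref{adjoint}, namely $\overline{\ncint X}=\ncint X^*$, together with the trace property: for~(i) this reads $\overline{\ncint A^*\DD^{-k}}=\ncint\DD^{-k}A=\ncint A\DD^{-k}$, and no separate reality statement is needed.

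This minor variation creates a genuine gap in your treatment of~(ii). Corollary~\ref{reel} does not list $\ncint\chi A\DD^{-k}$ among its real quantities, and in fact this integral is purely imaginary: since $(\chi A\DD^{-k})^*=\DD^{-k}A\chi$, one has
\[
\overline{\ncint\chi A\DD^{-k}}=\ncint\DD^{-k}A\chi=\ncint A\chi\DD^{-k}=-\ncint\chi A\DD^{-k}
\]
by the trace property and $A\chi=-\chi A$. So your step ``use Corollary~\ref{reel} to drop the complex conjugation'' fails precisely here. The paper's route---$\overline{\ncint X}=\ncint X^*$ plus trace plus $\chi A=-A\chi$---handles~(ii) without any reality input, which is why the hint mentions $\chi A=-A\chi$ explicitly. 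For~(i),~(iii),~(iv) your argument goes through as written, since Corollary~\ref{reel} does cover those cases.
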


\begin{proof}
\begin{align*}
\ncint A\,\DD^{-k}&=\overline{\ncint JA\,\DD^{-k}J^{-1}}=\overline{\ncint
JAJ^{-1}( \epsilon^k \DD^{-k})}=-\epsilon^{k+1}\,\overline{\ncint
A^*\,\DD^{-k}}=-\epsilon^{k+1}\,\ncint \,\DD^{-k}A \\
&=-\epsilon^{k+1}\,\ncint A\,\DD^{-k}. 
\end{align*}
The same argument gives the other equalities using $\chi A=-A \chi$ and $\chi \vert \DD \vert=\vert \DD \vert \chi$.
\end{proof}

\begin{lemma}
   \label{componentofzeta}
For any one-form $A$,
$\ncint \big( A \,\DD^{-1}\big)^k=0$ when $k\in \N$ is odd.
\end{lemma}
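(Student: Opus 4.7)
The plan is to exploit the reality operator $J$ together with the adjoint in $\ncint$, in a manner parallel to what was done in Lemma \ref{adjoint} and Corollary \ref{reel}. The key relation we will use is \eqref{JAJ}, namely $JAJ^{-1}=-\epsilon A^*$ for any one-form $A$ on a commutative geometry, combined with $J\DD J^{-1}=\epsilon\DD$ (so $J\DD^{-1}J^{-1}=\epsilon\DD^{-1}$), and the trace property of $\ncint$ provided by Proposition \ref{tracenc}.

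First, I would compute the conjugation of $(A\DD^{-1})^k$ by $J$. A direct calculation gives
\[
J(A\DD^{-1})J^{-1}=(JAJ^{-1})(J\DD^{-1}J^{-1})=(-\epsilon A^*)(\epsilon\DD^{-1})=-A^*\DD^{-1},
\]
hence $J(A\DD^{-1})^kJ^{-1}=(-1)^k(A^*\DD^{-1})^k$. Next I would produce two expressions for the complex conjugate $\overline{\ncint(A\DD^{-1})^k}$, by applying both halves of Lemma \ref{adjoint}. On one hand, $\overline{\ncint X}=\ncint JXJ^{-1}$ gives
\[
\overline{\ncint(A\DD^{-1})^k}=(-1)^k\ncint(A^*\DD^{-1})^k.
\]
On the other hand, $\overline{\ncint X}=\ncint X^*$ applied to $X=(A\DD^{-1})^k$, whose adjoint equals $(\DD^{-1}A^*)^k$, together with the trace property of $\ncint$ on a simple spectral triple (Proposition \ref{tracenc}) which lets us cycle the leftmost $\DD^{-1}$ to the right, yields
\[
\overline{\ncint(A\DD^{-1})^k}=\ncint(\DD^{-1}A^*)^k=\ncint(A^*\DD^{-1})^k.
\]

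Equating these two expressions gives $\ncint(A^*\DD^{-1})^k=(-1)^k\ncint(A^*\DD^{-1})^k$, so the integral vanishes whenever $k$ is odd. Since $A^*\in\Omega^1_\DD(\A)$ whenever $A\in\Omega^1_\DD(\A)$ (the bimodule of one-forms is stable under the star), this establishes the lemma in the stated generality. The reasoning is essentially a sign-tracking argument, and the only mildly delicate point is invoking the trace property to cycle the $\DD^{-1}$ factor; but since commutative geometries have simple dimension spectrum, Proposition \ref{tracenc} applies and this step is unproblematic. No genuine obstacle arises; the content of the lemma is really an antisymmetry forced by the interplay of the real structure with the hermitian conjugation under $\ncint$.
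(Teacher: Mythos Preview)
Your proof is correct and follows essentially the same approach as the paper: compute $J(A\DD^{-1})^kJ^{-1}=(-1)^k(A^*\DD^{-1})^k$ via \eqref{JAJ} and $J\DD J^{-1}=\epsilon\DD$, then combine the two identities of Lemma~\ref{adjoint} together with the trace property of $\ncint$ to obtain a sign $(-1)^k$. The paper chains the steps directly to arrive at $\ncint(A\DD^{-1})^k=(-1)^k\ncint(A\DD^{-1})^k$, whereas you equate two expressions for $\overline{\ncint(A\DD^{-1})^k}$ and conclude for $A^*$; this is a purely cosmetic difference.
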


\begin{proof}
We have
\begin{align}
\label{trick}
\ncint \big( A \,\DD^{-1}\big)^k&=\overline{ \ncint J\big(
A\DD^{-1}\big)^kJ^{-1}}=\overline{ \ncint \big( JAJ^{-1}\,
J\DD^{-1}J^{-1}\big)^k}=(-1)^k\epsilon^{2k}\overline{ \ncint
\big(A^*\DD^{-1}\big)^k} \nonumber  \\
&=(-1)^k\ncint (A\DD^{-1})^k
\end{align}
(which shows again that $\ncint A \DD^{-1}=0$).
\end{proof}

\subsection{Miscellaneous for commutative geometries}

To show that more noncommutative integrals, where the use of the operator $J$ in the trick \eqref{trick} is not sufficient, are nevertheless zero, we need to use the Wodzicki residue (see \cite{Wodzicki1,Wodzicki}):
in a chosen coordinate system and local trivialization $(x,\xi)$ of
$T^*M$, this residue is
\begin{align}
\label{wres}
wres_{x}(X) & :=\int_{S_x^*M} \Tr\big(\sigma_{-d}^X\,(x,\xi)\big)\,
\vert d\xi \vert\,\vert dx^1\wedge\cdots \wedge dx^d\vert,
\end{align}
where $\sigma_{-d}^X\,(x,\xi)$ is the symbol of the classical
pseudodifferential operator $X$ in the chosen coordinate frame $(x_1,\cdots,x_d)$, which is homogeneous of degree
$-d:=-\text{dim}(M)$ and taken at point $(x,\,
\xi)\in T^*(M)$, $d\xi$ is the normalized restriction of the volume
form to the unit sphere $S_x^*M \simeq \mathbb{S}^{d-1}$, so we
assume $d\geq 2$ to get $S_x^*M$ connected.

This $wres_x(X)$ appears to be a one-density not depending on the
local
representation of the symbol (see \cite{Wodzicki,Polaris}), so 
\begin{align}
Wres(X):=\int_M wres_x(X)
\end{align}
is well defined. 

The noncommutative integral $\ncint$ coincides with the Wodzicki
residue, up to a scalar: since both $\ncint$ and $Wres$ are
traces on the set of pseudodifferential operators, the uniqueness of
the trace \cite{Wodzicki} gives the proportionality  
\begin{align}
\label{Wres}
\ncint X=c_d\,Wres(X)
\end{align}
where $c_d$ is a constant depending only on $d$. Computing separately $\ncint \vert \DD \vert^{-d}$ and $Wres( \vert \DD \vert^{-d})$, we get $c_d>0$. (Note that $\ncint$ is not a positive functional, see Lemma \ref{scalarcurvature}.)

Lemma \ref{adjoint} follows for instance from the fact that $\int_M
wres_x(X^*)=\overline{\int_M wres_x(X)}$.

Note that $Wres$ is independent of the metric.

As noticed by Wodzicki, $\ncint X$ is equal to $-2$ times the
coefficient in log $t$ of the asymptotics of $\Tr(X\, e^{-t\,\DD^2}$)
as $t \rightarrow 0$. It is remarkable that this coefficient is
independent of $\DD$ and this gives a close relation between the
$\zeta$ function and the heat-kernel expansion with $Wres$. Actually, by
\cite[Theorem 2.7]{GS}
\begin{align}
\label{heat}
\Tr(X\, e^{-t\,\DD^2}) \sim_{t\rightarrow 0^+} \sum_{k=0}^{\infty}
a_k\,t^{(j-ord(X)-d)/2} + \sum_{k=0}^{\infty} (-a'_k\,\log
t+b_k)\,t^k,
\end{align}
so $\ncint X=2 a'_0$. Since, via Mellin transform, $\Tr(X\,\DD^{-2s})
=\tfrac{1}{\Gamma (s)} \int_0^\infty t^{s-1}\, \Tr(X\, e^{-t\,\DD^2})
\,dt$, the non-zero coefficient $a'_k$, $k\neq0$ create a pole of
$\Tr(X\,\DD^{-2s}) $ of order $k+2$ since $\int_0^1 t^{s-1}
\log(t)^k=\tfrac{(-1)^k k!}{s^{k+1}}$ and 
\begin{align}
\label{Gamma}
\Gamma(s)=\frac{1}{s} +\gamma+s\,g(s)
\end{align} 
where $\gamma$ is the Euler constant and the function $g$ is also holomorphic around zero.

We have $\ncint 1=0$ and more generally, $Wres(P)=0$ for all zero-order pseudodifferential projections \cite{Wodzicki1}.

For extension to log-polyhomogeneous pseudodifferential operators,
see \cite{Lesch}.

When $M$ has a boundary, some $a'_k$ are non zero, the dimension
spectrum can be non simple (even if it is simple for the Dirac operator, see for instance \cite{Lescure}).

On a spectral triple $(\A,\, \H,\, \DD)$, changing the
product on $\A$ may or not affect the dimension spectrum: for
instance, there is no change  when one goes from the commutative
torus to the noncommutative one, while the dimension
spectrum of $SU_q(2$) which is bounded from below, does not coincide
with the dimension spectrum of the sphere $\mathbb{S}^3$
corresponding to $q=1$ .

\vspace{0.3cm}
We first introduce few necessary notations. In the following we fix a local coordinate frame $(U,(x_i)_{1\leq i \leq n})$ which is normal at $x_0\in M$, and denote $\sigma_{k}^X$ the $k$-homogeneous symbol of any classical pseudodifferential operator $X$ on $M$, in this local coordinate frame. The Dirac operator is locally of the form---compatible with \eqref{phiDirac}
\begin{align}
\label{Dirac}
\DD=-i\ga (dx^j)\,\big(\partial_{x^j}+\omega_j(x)\big)
\end{align}
where $\omega_j$ is the spin connection, $\ga$ is the Clifford
multiplication of one-forms \cite[page 392]{Polaris}. Here we make the choice of gauge given
by $h:=\sqrt{g}$ which gives \cite[Exercise 9.6]{Polaris}
$$
\omega_i=-\tfrac{1}{4}\, \big( \Gamma_{ij}^k
\,g_{kl}-\partial_{x^j}(h_j^\alpha) \delta_{\a \beta}\,h_l^\beta
\big)\, \ga(dx^j)\,\ga(dx^l),\quad
\ga(dx^j)=\sqrt{g^{-1}}^{\,jk}\,\gamma_k
$$
where $\gamma^j=\gamma_j$ are the selfadjoint constant $\gamma$
matrices satisfying $\set{\ga^i ,\ga^j}=\delta^{ij}$. Thus
$$
\sigma^{\DD}(x,\xi)=\sqrt{g^{-1}}^{\,jk}\,\gamma_k \big(\xi_j-i\,\omega_j(x)\big).
$$

We have chosen normal (or geodesic) coordinates around the base point $x_0$. Since 
\begin{align*}
& g_{ij}(x)=g_{ij}(x_0)+\tfrac{1}{3}R_{ijkl}\,x^kx^l +o(\vert \vert x
\vert\vert^3),\\
& g^{ij}(x)=g^{ij}(x_0)-\tfrac{1}{3}{{{{R}^i}_k}{}^j}_l\,x^kx^l +o(\vert \vert x \vert\vert^3),\\
& g_{ij}(x_0)=\delta_{ij}, \quad\Gamma_{ij}^k(x_0)=0,
\end{align*}
the matrices $h(x)$ and $h^{-1}(x)$ have no linear terms in $x$. Thus
$$
\om_i(x_0)=0.
$$
We could also have said that parallel translation of a basis of the
cotangent bundle along the radial geodesics emanating from $x_0$
yields a trivialization (this is the radial gauge) such that
$\om_i(x_0)=0$.
In particular, using product formulae for symbols and the fact that in
the decomposition $D=\DD+P$, $P\in OP^{-\infty}$, we get for $k \in
\N$
\begin{align}
&\sigma_1^{\DD}(x,\xi)=\sqrt{g^{-1}}^{\,jk}(x)\,
\gamma_k\xi_j=\ga(\xi),
&&\sigma_1^{\DD}(x_0,\xi)=\gamma^j\xi_j,\label{sigma1}\\
&\sigma_0^{\DD}(x,\xi)=-i\sqrt{g^{-1}}^{\,jk}(x)\,\gamma_k\om_j(x),
&&\sigma_0^{\DD}(x_0,\xi)=0,\label{sigma0}\\
&\partial_{x^k}\sigma_1^{\DD}(x_0,\xi)=0, \label{deriveedesigma1}\\
&\sigma_{-1}^{\DD^{-1}}(x,\xi)=\sqrt{g^{-1}}^{\,jk}(x)\,\gamma_{j}\xi_k \, \vert
\vert \xi \vert \vert_x^{-2}, && \vert \vert \xi \vert \vert_x^2:=g^{jk}(x)\,\xi_j \xi_k \label{sigma-1}\\
&\partial_{x^k}\sigma_{-1}^{\DD^{-1}}(x_0,\xi)=0. &&
\label{deriveesigma-1}
\end{align}
We will use freely the fact that the symbol of a one-form $A$ can be written as 
\begin{align}
\label{sigmaA}
\sigma^A(x,\xi)=\sigma_0^A(x)=-i \, a_k(x) \, \gamma^k
\end{align}
with $a_k(x) \in i\R$ when $A=A^*$.

When $d$ is even (so $\epsilon=1$), remark that for $k=l$ and
$A_i=a_i[\DD,b_i]$ and $a=\prod_{i=1}^k a_i$, then by \cite[page 231
(actually, $\chi$ is missing)]{CM}, \cite{Ponge2} or \cite[p.
479]{Polaris} when $k=d$, ($M$ is supposed to be oriented)
$$
\ncint \chi A_1\cdots A_k \vert \DD\vert^{-k}=c'_k \int_M
\hat{A}(R)^{(d-k)}\wedge a db_1 \wedge \cdots \wedge db_k
$$
where $\hat{A}(R)$ is the $\hat{A}$-genus associated to the Riemannian
curvature $R$. Since  we have $\hat{A}(R)\in \oplus_{j\in \N}\Omega^{4j}(M,\R)$,
$\ncint \chi A^k \vert D\vert^{-k}$ can be non zero only when
$k=d-4j$. For instance in dimension $d=$2, for $j=0$,
$$
\sigma_{-2}^{\chi A_1A_2 \DD^{-2}}(x,\xi)=\sigma_0^{\chi
A_1A_2}(x)\,\sigma_{-2}^{\DD^{-2}}(x,\xi)=-a_1(x)\,a_2(x) \,\chi
g^{jk}(x)\gamma_{j}\gamma_{k}\,\tfrac{1}{g^{lm}(x)\xi_{l}\xi_{m}}.
$$
Thus $wres_{x}(\chi A_1A_2 \DD^{-2})=-2\,a_1(x)\,a_2(x)\, \sqrt{\text{det}\, g_x} \, \Tr(\chi \gamma^{j}\gamma^{k})$,  so if $ \nu_g$ is the Riemannian density, 
\begin{align}
\label{d=2}
\ncint \chi A_1A_2\,\DD^{-2}=-2 c_d\,\, \Tr(\chi \gamma^{j}\gamma^{k})\int_M\,a_1a_2\, \nu_g .
\end{align}
Actually, this last equality is nothing else than Wodzicki--Connes' trace theorem, see \cite[section
7.6]{Polaris}, and this is equal to $c'_d\int_M   a_1a_2db_1\wedge db_2$ as claimed above.

\vspace{0.3cm}
We introduce a few subspaces of the pseudodifferential operators space
$\Psi (M)$. Let
\begin{align*}
\B_e&:=\set{P \in \Psi(M) \,:\, \sigma_j^P \in E_j, \, \forall j\in
\Z}\quad \text{e for even}, \\
\B_o & :=\set{P \in \Psi(M) \,:\, \sigma_j^P \in O_j, \, \forall j\in
\Z}\quad \text{o for odd},
\end{align*}
such that, for $m=2^{[d/2]}$,

\begin{align*}
&E_j := \set{ f\in C^{\infty} \big(U\times
\R^d\backslash\set{0},\M_m(\C) \big) \ : \   f(x,\xi) = \sum_{i\in I}
\tfrac{\xi^{\b^i}}{\norm{\xi}_x^{2k_i}} h_i(x) \ , \ I\neq \emptyset,
\\
&\hskip3cm k_i\in \N, \ \b^i \in \N^d  \ , |\b^i|-2k_i = j \ , \
h_i\in C^{\infty}(U,\M_m(\C))}\, , \\
& O_j := \set{ f\in C^{\infty} \big(U\times
\R^d\backslash\set{0},\M_m(\C) \big) \ : \   f(x,\xi) = \sum_{i\in I}
\tfrac{\xi^{\b^i}}{\norm{\xi}_x^{2k_i+1}} h_i(x) \ , \ I\neq
\emptyset, \\
&\hskip3cm k_i\in \N, \ \b^i \in \N^d,  \  |\b^i|-(2k_i+1) = j \ , \
h_i\in C^{\infty}(U,\M_m(\C))}\, .
\end{align*}

\begin{lemma} 
\label{Ejlem}
For any $j,\,j' \in \Z$ and  $\a\in \N^d$, 

$(i)$ $E_j E_{j'}\subseteq E_{j+j'} \text{ and } \del_\xi^{\a} E_j
\subseteq E_{j-|\a|},\,\,\del_x^\a E_j\subseteq E_j$.

$(ii)$ $O_j O_{j'}\subseteq E_{j+j'}$ and $\del_\xi^{\a} O_j
\subseteq O_{j-|\a|}$, $\del_x^\a O_j\subseteq O_j$.

$(iii)$ $O_j E_{j'}$ and $E_{j'} O_j$ are included in $O_{j+j'}$. 

$(iv)$ $\B_e$ is a sub-algebra of $\Psi(M)$.

$(v)$  $\B_e\B_e$, $\B_o \B_o$ are included in $\B_e$, and $\B_e
\B_o$, $\B_o \B_e$ are included in $\B_o$.
\end{lemma}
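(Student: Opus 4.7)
The plan is to reduce everything to elementary computations on the monomial generators of $E_j$ and $O_j$, and then lift these algebraic facts to the operator-level statements (iv) and (v) via the local symbol composition formula.

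First I would prove (i)--(iii) by checking them on a single generator and invoking linearity. The product of two $E$-type generators $\xi^\beta\norm{\xi}_x^{-2k}h(x)$ and $\xi^{\beta'}\norm{\xi}_x^{-2k'}h'(x)$ is visibly of $E$-type with index $(|\beta|+|\beta'|)-2(k+k')=j+j'$; combining two $O$-type generators produces $\norm{\xi}_x^{-(2k+1)}\norm{\xi}_x^{-(2k'+1)}=\norm{\xi}_x^{-2(k+k'+1)}$, which is an even power, giving $E_{j+j'}$; the mixed product keeps exactly one odd power of $\norm{\xi}_x$ and lands in $O_{j+j'}$. For the derivative statements, the key identity is
\[
\partial_{\xi_l}\norm{\xi}_x^{-s} = -s\,g^{lm}(x)\,\xi_m\,\norm{\xi}_x^{-s-2},
\]
which shows that $\partial_{\xi_l}$ applied to a generator either lowers $|\beta|$ by one (hitting the monomial $\xi^\beta$) or shifts the exponent of $\norm{\xi}_x$ by $-2$ while raising $|\beta|$ by one (hitting the denominator): in both cases the total index drops by one and the parity of the exponent of $\norm{\xi}_x$ is preserved. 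Derivatives in $x$ only hit the smooth coefficients $h(x)$ and the metric, leaving the $\xi$-structure intact, so $\partial_x^\alpha$ preserves both the index and the parity class.

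For (iv) and (v), I would apply the local symbol composition formula
\[
\sigma_k^{PQ}(x,\xi)=\sum_{j+j'-|\alpha|=k}\tfrac{(-i)^{|\alpha|}}{\alpha!}\,\partial_\xi^\alpha\sigma_j^P(x,\xi)\,\partial_x^\alpha\sigma_{j'}^Q(x,\xi)
\]
termwise: when $P,Q\in\B_e$, part (i) places each summand in $E_{j-|\alpha|}\cdot E_{j'}\subseteq E_k$, so $\sigma_k^{PQ}\in E_k$ and $PQ\in\B_e$, proving (iv). The three remaining products in (v) are handled identically, invoking (ii) or (iii) according to the parities of $P$ and $Q$.

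No step presents a real obstacle; the argument is essentially bookkeeping organized around the single observation that $\norm{\xi}_x^2=g^{jk}(x)\,\xi_j\xi_k$ is polynomial in $\xi$, which is what makes both multiplication and $\xi$-differentiation respect the $E/O$ parity dichotomy.
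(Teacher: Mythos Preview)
Your proposal is correct and follows essentially the same approach as the paper: verify (i)--(iii) on generators using the derivative identity for $\norm{\xi}_x^{-s}$, then deduce (iv)--(v) term-by-term from the symbol composition formula. The paper organizes the higher $\xi$-derivatives via an explicit inductive formula for $\del^\ga_\xi(\norm{\xi}_x^{-2k})$ and $\del^\ga_\xi(\norm{\xi}_x^{-1})$ rather than iterating the single-derivative identity, but this is a cosmetic difference.
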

\begin{proof}
$(i)$ Let $f \in E_j$ and $\a \in \N^d$. We have, if $f(x,\xi) =
\sum_{i\in I} \tfrac{\xi^{\b^i}}{\norm{\xi}_x^{2k_i}} \,h_i(x)$, 
$$
\del^\a_\xi f = \sum_{i\in I}
\del^\a_{\xi}(\tfrac{\xi^{\b^i}}{\norm{\xi}_x^{2k_i}}) \,h_i(x)=
\sum_{i\in I} \sum_{\ga\leq \a} \tbinom{\a}{\ga}
\del^{\a-\ga}_{\xi}(\xi^{\b^i})
\del^\ga_\xi(\tfrac{1}{\norm{\xi}_x^{2k_i}}) \,h_i(x).
$$
We check by induction that we can write
$$
\del^\ga_\xi(\tfrac{1}{\norm{\xi}_x^{2k_i}}) =
\tfrac{1}{\norm{\xi}_x^{2k_i(|\ga|+1)}} \sum_{p} \la_{p}
\prod_{j=1}^{|\ga|} \del^{\b^{j,p}}_\xi \norm{\xi}_x^{2k_i}
$$
where $\la_{p}$ are real numbers, the sum on indices $p$ is
finite, and $\sum_{j=1}^{|\ga|} \b^{j,p} = \ga$. As a consequence,
since $\norm{\xi}_x^{2k_i}=(g^{kl}(x)\xi_k\xi_l)^{k_i}$ is a
homogeneous polynomial in $\xi$ of degree $2k_i$, we get 
$\del^\a_\xi f \in E_{j-|\a|}$. The inclusions $E_j E_{j'}\subseteq
E_{j+j'}$, $\del_x^\a E_j\subseteq E_j$ are straightforward.

$(ii)$ The proof is similar to $(i)$ since by induction
$$
\del^\ga_\xi(\tfrac{1}{\norm{\xi}_x}) =
\tfrac{1}{\norm{\xi}_x^{2|\ga|+1}} \sum_{p} \la_{p}
\prod_{j=1}^{|\ga|} \del^{\b^{j,p}}_\xi \norm{\xi}_x^{2}
$$
where $\la_{p}$ are real numbers, the sum on the indices $p$ is
finite and $\sum_{j=1}^{|\ga|} \b^{j,p} = \ga$.

$(iii)$ Straightforward.

$(iv)$ The product symbol formula for two classical
pseudodifferential operators $P\in \Psi^p(M)$, $Q\in \Psi^q(M)$ gives
\begin{equation}
\sg^{PQ}_{p+q-j} = \sum_{\a\in \N^d}\, \sum_{k\geq 0, \ |\a|+k\leq j}
i^{|\a|}\tfrac{(-1)^{|\a|}}{\a!} \, \del_{\xi}^\a
\sg^P_{p-j+|\a|+k}\, \del^\a_x \sg^Q_{q-k} \, .
\label{prodsymbol}
\end{equation}
The presence of the factor $i^{|\a|}$ that will be crucial in later
arguments like Lemma \ref{Calg}.

If $P,Q \in \B_e$, we see that by $(i)$, $\del_{\xi}^\a
\sg^{P}_{p-j+|\a|+k}\in E_{p-j+k}$ and $\del^\a_x \sg^Q_{q-k} \in
E_{q-k}$. 
Again by $(i)$, we obtain $\del_{\xi}^\a \sg^P_{p-j+|\a|+k}\,
\del^\a_x \sg^Q_{q-k} \in E_{p+q-j}$, so the result follows from
(\ref{prodsymbol}).

$(v)$ A similar argument as $(iv)$ can be applied, using $(ii)$ to
obtain $\B_o\B_o\subseteq \B_e$ and $(iii)$ to get $\B_o\B_e
\subseteq \B_o$, $\B_e\B_o\subseteq \B_o$.
\end{proof}

$\B_e$ and $\B_o$ are stable by inverse:

\begin{lemma}
\label{Binversion}
Let $P\in \B_e$ (resp. $\B_o$) be an elliptic classical
pseudodifferential operator in $\Psi^p(M)$ with
$\sg^P_p(x,\xi)=\norm{\xi}_x^p$, $p\in \N$. Then any parametrix
$P^{-1}$ of $P$ is in $\B_e$ (resp. $\B_o$).
\end{lemma}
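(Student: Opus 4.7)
The plan is to construct a parametrix $Q$ for $P$ explicitly via the standard recursion on its homogeneous symbol components $\sigma^{Q}_{-p-j}$, $j\geq 0$, and then show by induction on $j$ that each component lies in $E_{-p-j}$ (resp.\ $O_{-p-j}$). Since any two parametrices differ by a smoothing operator, which contributes trivially at each homogeneous order, it suffices to treat this canonical $Q$. The recursion comes from requiring $\sigma^{PQ}\sim 1$: using \eqref{prodsymbol} with $q=-p$, one gets $\sigma^{PQ}_{0}=\sigma^P_p\,\sigma^Q_{-p}=1$ and, for $j\geq 1$,
\begin{equation*}
\sigma^P_p\,\sigma^Q_{-p-j}\;=\;-\!\!\sum_{\substack{|\alpha|+k\leq j\\ k<j}}\tfrac{(-i)^{|\alpha|}}{\alpha!}\,\partial_\xi^\alpha\sigma^P_{p-j+|\alpha|+k}\,\partial_x^\alpha\sigma^Q_{-p-k}.
\end{equation*}

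In the case $P\in\B_e$ (necessarily with $p=2q$ even, since $\sg^P_p=\|\xi\|_x^{p}\in E_p$), the base case $j=0$ reads $\sigma^Q_{-p}=\|\xi\|_x^{-p}$, which belongs to $E_{-p}$ by taking $\beta=0$, $k_1=q$. Assume inductively $\sigma^Q_{-p-k}\in E_{-p-k}$ for all $k<j$. By Lemma \ref{Ejlem}$(i)$, each factor $\partial_\xi^\alpha\sigma^P_{p-j+|\alpha|+k}$ lies in $E_{p-j+k}$ and $\partial_x^\alpha\sigma^Q_{-p-k}$ in $E_{-p-k}$, so their product lies in $E_{-j}$; multiplying by $\|\xi\|_x^{-p}\in E_{-p}$ gives an element of $E_{-p-j}$. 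Thus $\sigma^Q_{-p-j}\in E_{-p-j}$, and $Q\in\B_e$.

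In the case $P\in\B_o$ (with $p$ odd), the base case gives $\sigma^Q_{-p}=\|\xi\|_x^{-p}\in O_{-p}$. Suppose $\sigma^Q_{-p-k}\in O_{-p-k}$ for $k<j$. Now each factor $\partial_\xi^\alpha\sigma^P_{p-j+|\alpha|+k}$ lies in $O_{p-j+k}$ and $\partial_x^\alpha\sigma^Q_{-p-k}$ in $O_{-p-k}$ by Lemma \ref{Ejlem}$(ii)$; the crucial parity point is that the product of two odd symbols falls in an even class: $O_{p-j+k}\,O_{-p-k}\subseteq E_{-j}$. Multiplying by $\|\xi\|_x^{-p}\in O_{-p}$ and invoking Lemma \ref{Ejlem}$(iii)$ (the mixing rule $O_{\cdot}E_{\cdot}\subseteq O_{\cdot}$) places the result in $O_{-p-j}$, giving $Q\in\B_o$.

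The proof is essentially a bookkeeping induction; there is no serious obstacle, but one must pay attention to the parity accounting in the $\B_o$ case, where the squaring of odd factors forces passage through $E$ before being restored to $O$ by the leading inverse symbol. The only mild subtlety is that the definitions of $E_j$ and $O_j$ require the index set $I$ to be non-empty: this is automatic at every step of the recursion since $\|\xi\|_x^{-p}\neq 0$, and finite sums of representable symbols remain representable after application of Lemma \ref{Ejlem}.
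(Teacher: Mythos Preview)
Your proof is correct and follows essentially the same approach as the paper: both construct the parametrix symbol recursively from $\sigma^{PQ}\sim 1$, establish the base case $\sigma^Q_{-p}=\|\xi\|_x^{-p}\in E_{-p}$ (resp.\ $O_{-p}$), and then induct using the closure properties of Lemma~\ref{Ejlem}. Your treatment of the $\B_o$ case, making explicit the parity route $O\cdot O\subseteq E$ followed by $O\cdot E\subseteq O$, is exactly what the paper leaves implicit when it says ``the case $P\in\B_o$ is similar.''
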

\begin{proof} 
Assume $P \in \B_e$ so $p$ is even. From the parametrix equation $P
P^{-1} = 1$, we obtain  $\sg_{-p}^{P^{-1}} = (\sg_p^{P})^{-1}=
\norm{\xi}_x^{-p} \in E_{-p}$. Moreover, using (\ref{prodsymbol}), we
see that for any $j\in \N^*$,
\begin{align}
\sg_{-p-j}^{P^{-1}} = -(\sg_p^{P})^{-1}\big( \sum_{0\leq k<j}
\sg_{p-j+k}^P \, \sg_{-p-k}^{P^{-1}} 
+ \sum_{0<|\a|\leq j} \sum_{k=0}^{j-|\a|}
i^{|\a|}\tfrac{(-1)^{|\a|}}{\a!}  \,\del_{\xi}^\a \sg_{p-j+|\a|+k}^P
\, \del^\a_x \sg_{-p-k}^{P^{-1}} \,\big)
\label{parametrix}
\end{align}
We prove by induction that for any $j\in \N$, $\sg_{-p-j}^{P^{-1}}
\in E_{-p-j}$: suppose that for a $j\in \N^*$, we have for any
$j'<j$, $\sg_{-p-j'}^{P^{-1}} \in E_{-p-j'}$. We then directly check
with Lemma \ref{Ejlem} and (\ref{parametrix}) that
$\sg_{-p-j}^{P^{-1}} \in E_{-p-j}$.

The case $P\in \B_o$ is similar.
\end{proof}

\begin{lemma}
\label{DiracB}
For any $k\in \Z$, $\DD^{k} \in \B_e$ and when $k$ is odd, $|\DD|^k\in
\B_o$.
\end{lemma}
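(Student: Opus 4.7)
The plan is to handle the two statements separately: $\DD^k\in\B_e$ reduces to the algebraic closure properties of $\B_e$ applied to the symbol of $\DD$ itself, whereas $|\DD|^k\in\B_o$ for odd $k$ reduces to proving $|\DD|\in\B_o$, which requires a recursive symbol computation.

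For $\DD^k$, formulas \eqref{sigma1} and \eqref{sigma0} exhibit $\sigma^{\DD}_1(x,\xi)=\sqrt{g^{-1}}^{jk}\gamma_k\xi_j\in E_1$ (take $\beta=e_j$, denominator $\norm{\xi}_x^0$) and $\sigma^{\DD}_0(x,\xi)=-i\sqrt{g^{-1}}^{jk}\gamma_k\omega_j(x)\in E_0$ (no $\xi$-dependence), so $\DD\in\B_e$ directly. Then $\DD^2\in\B_e$ by the subalgebra property (Lemma \ref{Ejlem}$(iv)$), and since $\DD^2$ is elliptic with principal symbol $\norm{\xi}_x^2$, Lemma \ref{Binversion} supplies a parametrix $\DD^{-2}\in\B_e$. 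All integer powers $\DD^k$ then lie in $\B_e$ by iterated application of Lemma \ref{Ejlem}$(iv)$.

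For $|\DD|^k$ with $k$ odd it suffices to establish $|\DD|\in\B_o$: then $|\DD|^k=|\DD|\cdot\DD^{k-1}\in\B_o\cdot\B_e\subseteq\B_o$ by Lemma \ref{Ejlem}$(v)$, since $k-1$ is even and $\DD^{k-1}\in\B_e$ by the previous step. I would obtain $|\DD|\in\B_o$ by constructing its full symbol $|\DD|\sim\sum_{j\geq 0}q_{1-j}$ recursively from the identity $|\DD|^2=\DD^2$. The leading term is $q_1=\norm{\xi}_x\in O_1$, the positive square root of $\sigma^{\DD^2}_2=\norm{\xi}_x^2$. Expanding the symbol product \eqref{prodsymbol} for $|\DD|^2$ and collecting the two top-order contributions $q_{1-j}q_1$ and $q_1 q_{1-j}$ (which combine as $2\norm{\xi}_x\,q_{1-j}$ because $q_1$ is scalar), one gets $\sigma^{\DD^2}_{2-j}=2\norm{\xi}_x\, q_{1-j}+R_j$ with remainder
\begin{align*}
R_j=\sum_{\substack{m+m'+|\alpha|=j\\ m,m'<j}} i^{|\alpha|}\,\tfrac{(-1)^{|\alpha|}}{\alpha!}\,\partial_\xi^\alpha q_{1-m}\,\partial_x^\alpha q_{1-m'}.
\end{align*}
Assuming inductively $q_{1-m}\in O_{1-m}$ for all $m<j$, Lemma \ref{Ejlem}$(ii)$ yields $\partial_\xi^\alpha q_{1-m}\in O_{1-m-|\alpha|}$ and $\partial_x^\alpha q_{1-m'}\in O_{1-m'}$, and their product lies in $E_{2-j}$ by the same lemma; hence $R_j\in E_{2-j}$. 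Since also $\sigma^{\DD^2}_{2-j}\in E_{2-j}$, I can solve
\begin{align*}
q_{1-j}=\tfrac{1}{2\norm{\xi}_x}\big(\sigma^{\DD^2}_{2-j}-R_j\big)\in O_{-1}\cdot E_{2-j}\subseteq O_{1-j}
\end{align*}
by Lemma \ref{Ejlem}$(iii)$. Uniqueness of the positive square root of $\DD^2$ modulo smoothing identifies the operator thus constructed with $|\DD|$, completing the induction.

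The central technical point is the parity bookkeeping: the inductive class $O_{1-m}$ is the ``odd'' one, yet the remainder $R_j$ must match $\sigma^{\DD^2}_{2-j}\in E_{2-j}$, and this works precisely because the product of two elements of odd classes lives in an even class ($O\cdot O\subseteq E$, Lemma \ref{Ejlem}$(ii)$); multiplication by $(2\norm{\xi}_x)^{-1}\in O_{-1}$ then returns us to the required $O_{1-j}$ via $E\cdot O\subseteq O$ (Lemma \ref{Ejlem}$(iii)$). Without the careful matching of these parities the recursion would produce symbols in the wrong class, so this is the main obstacle to make fully rigorous.
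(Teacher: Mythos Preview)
Your proof is correct and follows essentially the same route as the paper: first $\DD\in\B_e$ from \eqref{sigma1}--\eqref{sigma0}, then Lemma~\ref{Binversion} and the subalgebra property give all $\DD^k\in\B_e$; for $|\DD|$ the paper writes down exactly the recursion you derive (its equation~\eqref{sqrt}) and closes the induction $\sigma_{1-j}^{|\DD|}\in O_{1-j}$ by the same parity bookkeeping $O\cdot O\subseteq E$, $O\cdot E\subseteq O$. Your reduction $|\DD|^k=|\DD|\cdot\DD^{k-1}$ for odd $k$ (using $|\DD|^{2}=\DD^2$) is what the paper means by ``the result follows as above.''
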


\begin{proof}
Since $\DD\in \B_e$, $\DD^{-2}$ is in $\B_e$ by Lemma
\ref{Binversion} and \ref{Ejlem} and so is $\DD^k$.

Using \eqref{prodsymbol} for the equation $|\DD||\DD|=\DD^2$, we
check that $\sg_1^{|\DD|}(x,\xi)=\norm{\xi}_x$ and for any $j\in
\N^*$,
\begin{align}
\sg_{1-j}^{|\DD|}=&\tfrac{1}{2\norm{\xi}_x}
\big(\,\sg_{2-j}^{\DD^2}-\sum_{0<k<j}\sg_{1-j+k}^{|\DD|} \,
\sg_{1-k}^{|\DD|} 
+ \sum_{0<|\a|\leq
j}\sum_{k=0}^{j-|\a|}i^{|\a|}\tfrac{(-1)^{|\a|}}{\a!} \, \del^\a_\xi
\sg_{1-j+|\a|+k}^{|\DD|} \del_{x}^\a \sg_{1-k}^{|\DD|} \, \big).
\label{sqrt}
\end{align}
Again, a straightforward induction argument shows that for any $j\in
\N$, $\sg_{1-j}^{|\DD|} \in O_{1-j}$, and thus $|\DD|\in \B_o$. The
result follows as above.
\end{proof}

In the next four lemmas, we emphasize the fact that only some of the results could be obtained using the trick \eqref{trick} with the operator $J$. 

\begin{lemma}
\label{lemadxi}

(i) If $d$ is odd, then for any $P\in \B_e$, $\ncint P = 0$.

(ii) If $d$ is even, then for any $P\in \B_o$, $\ncint P = 0$.

(iii) For any pseudodifferential operator $P \in \Psi_1(\A)$,

- when $d$ is odd, then $\ncint P =0$,

- when $d$ is even, then $\ncint P |\DD|^{-1} = 0$.
\end{lemma}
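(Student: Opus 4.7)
My plan is to reduce each item to the parity argument inherent in the Wodzicki residue formula \eqref{wres}, using the fact that for $P \in \B_e$ (resp.\ $\B_o$), the $(-d)$-homogeneous symbol is forced to be a function of $\xi$ of definite parity.

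For (i) and (ii): by \eqref{Wres}, $\ncint P = c_d \int_M \int_{S_x^*M} \Tr\bigl(\sigma_{-d}^P(x,\xi)\bigr)\, d\xi\, dx$. If $P \in \B_e$ then $\sigma_{-d}^P(x,\xi) \in E_{-d}$ is a finite sum of terms $\xi^{\beta}\norm{\xi}_x^{-2k} h(x)$ with $|\beta| - 2k = -d$; when $d$ is odd, $|\beta|$ is odd, so $\xi \mapsto \xi^{\beta}\norm{\xi}_x^{-2k}$ is odd under $\xi \mapsto -\xi$, and the integral over $S_x^*M$ vanishes since the antipodal map preserves the sphere measure. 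The symmetric argument for $O_{-d}$: if $P \in \B_o$, each term has $|\beta| - (2k+1) = -d$, and for $d$ even $|\beta|$ is odd, giving again an odd integrand.

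For (iii): fix $N > d$ and write $P = Q D^{-2p} + R$ with $Q \in \DD_1(\A)$ and $R \in OP^{-N}$. Such an $R$ is trace class, so $\zeta_D^R(s) = \Tr(R|D|^{-s})$ is holomorphic at $s=0$ and hence $\ncint R = 0$. In the commutative setting $J\A J^{-1} = \A^*$, so $\DD_1(\A)$ is generated by $\A$ and $\DD$; elements of $\A$ have scalar symbols lying in $E_0$, and $\DD \in \B_e$ by Lemma~\ref{DiracB}, so by Lemma~\ref{Ejlem}\,(iv), $\DD_1(\A) \subseteq \B_e$. Since $D - \DD = P_0$ is finite rank hence smoothing, $D^{-2p}$ has the same classical symbol as the parametrix $\DD^{-2p}$ modulo smoothing, which lies in $\B_e$ by Lemma~\ref{DiracB}. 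Therefore $QD^{-2p} \in \B_e$ modulo smoothing.

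Conclude as follows. If $d$ is odd, (i) applied to $QD^{-2p}$ gives $\ncint QD^{-2p} = 0$, hence $\ncint P = 0$. If $d$ is even, write $P|\DD|^{-1} = QD^{-2p}|\DD|^{-1} + R|\DD|^{-1}$; the remainder is in $OP^{-N-1}$ and has vanishing residue, while $|\DD|^{-1} \in \B_o$ by Lemma~\ref{DiracB} and Lemma~\ref{Ejlem}\,(v) gives $QD^{-2p}|\DD|^{-1} \in \B_e \B_o \subseteq \B_o$, so (ii) applies and the integral vanishes. The only delicate point is bookkeeping the smoothing corrections coming from $P_0$ and from $R$, but both are automatically absorbed because they contribute operators whose Wodzicki residue vanishes — no estimates beyond the order counting in Lemma~\ref{Ejlem} are needed.
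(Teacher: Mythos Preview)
Your proof is correct and follows the same approach as the paper: the parity of $|\beta|$ forces $\sigma_{-d}^P$ to be odd in $\xi$, so the cosphere integral vanishes. For part (iii) the paper just writes ``Direct consequence of $(i)$ and $(ii)$'', whereas you spell out explicitly why $\DD_1(\A)\subseteq\B_e$ in the commutative case (via $J\A J^{-1}=\A$) and how the $D^{-2p}$, $|\DD|^{-1}$ and remainder terms fit into the $\B_e/\B_o$ framework --- this is exactly the unpacking the paper leaves to the reader.
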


\begin{proof}
$(i)$ Since $\sg^P_{-d} \in E_{-d}$, $\sg^P_{-d}(x,\xi)=\sum_{i\in I}
\tfrac{\xi^{\b^i}}{\norm{\xi}_x^{2k_i}} \,h_i(x)$ where 
$|\b^i|$ are odd. The integration on the cosphere in \eqref{wres}
therefore vanishes.

$(ii)$ The same argument can be applied.

$(iii)$ Direct consequence of $(i)$ and $(ii)$.
\end{proof}

\begin{remark} Lemma \ref{lemadxi} (iii) entails for instance that $\ncint B|\DD|^{-(2k+1)}$ where $B$ is a polynomial in $\A$ and $\DD$ and $k\in \N$, always vanish in even dimension, while $\ncint B \DD^{-2k}$ always vanish in odd dimension. In other words, $\ncint B |D|^{-(d-q)} =0$ for any odd integer $q$.
\end{remark}

We shall now pay attention to the real or purely imaginary nature
(independently of the appearance of gamma matrices) of homogeneous
symbols of a given pseudodifferential operator. Let 
$$
\CC:=\set{P\in \Psi^p(M) \, : \,   \sg_{p-j}^{P} \in I_{j},\, \forall
j\in \N}
$$
where $I_k=I_e$ if $k$ is even and $I_{k}=I_o$ if $k$ is odd, with 
\begin{align*}
& I_e:=\set{f\in C^{\infty} \big(U\times \R^n,\M_m(\C) \big)
\,: \, f = \ga_{k_1}\cdots \ga_{k_q} \, h(x,\xi) \ , \ h \text{ real valued}}, \\
& I_o:=\set{f\in C^{\infty} \big(U\times \R^n,\M_m(\C) \big) \, : \, f = i \,\,\ga_{k_1}\cdots \ga_{k_q} \,h(x,\xi) \ , \ h \text{ real valued}}.
\end{align*}

\begin{lemma}
\label{Calg}
(i) $\CC$ is a sub-algebra of $\Psi(M)$.

\noindent (ii) If $P\in \CC$ is  hypo-elliptic then $P^{-1}\in \CC$.

\noindent (iii) $\DD^{k}\in \CC$ and $|\DD|^k\in \CC$ for any $k\in
\Z$.
\end{lemma}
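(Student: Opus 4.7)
The plan is to exploit the factor $i^{|\a|}$ in the symbol product formula \eqref{prodsymbol}: this factor is precisely what makes the parity bookkeeping (real scalar vs.\ imaginary scalar times gamma monomials) close up. Concretely, let me first record the multiplicative behaviour of $I_e$ and $I_o$: pointwise multiplication gives $I_e\cdot I_e\subseteq I_e$, $I_e\cdot I_o\subseteq I_o$, and $I_o\cdot I_o\subseteq I_e$ (because $i\cdot i=-1$ turns the two imaginary scalars into a real one, while the gamma monomials just concatenate). Partial derivatives $\del_x$ and $\del_\xi$ act only on the real-valued scalar coefficient $h(x,\xi)$, hence preserve each $I_j$. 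Consequently, writing $j\!\mod 2$ as the ``parity label'', one has $I_a\cdot I_b\subseteq I_{a+b}$ and $\del^\a I_a\subseteq I_a$.

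For (i), I will plug $P\in\CC^p$, $Q\in\CC^q$ into \eqref{prodsymbol}. In each summand, $\del_\xi^\a\sg_{p-j+|\a|+k}^P\in I_{j-|\a|-k}$ and $\del_x^\a\sg_{q-k}^Q\in I_k$, so their product lies in $I_{j-|\a|}$. Multiplying by $i^{|\a|}$ shifts the parity label by $|\a|$ (since an extra factor of $i$ turns a real scalar into an imaginary one, with $i^{|\a|}$ providing exactly $|\a|$ such swaps up to an overall sign that preserves reality/imaginarity), landing the term in $I_j$. Summing over $(\a,k)$ keeps us in $I_j$, so $\sg_{p+q-j}^{PQ}\in I_j$ and $PQ\in\CC$.

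For (ii), I proceed by induction on $j$ using the parametrix recursion \eqref{parametrix}. The base case is $\sg_{-p}^{P^{-1}}=(\sg_p^P)^{-1}$; since $\sg_p^P\in I_0=I_e$ is real-scalar times a gamma monomial, its inverse (as an element of the Clifford algebra of symbols, defined because $P$ is hypoelliptic) is still real-scalar times a gamma monomial, hence in $I_e$. For the inductive step, assume $\sg_{-p-k}^{P^{-1}}\in I_k$ for all $k<j$. Every term in \eqref{parametrix} is of exactly the shape analysed in (i), with $P$ playing its own role and $P^{-1}$ playing the role of $Q$; the same $i^{|\a|}$-parity counting shows each term is in $I_j$, so $\sg_{-p-j}^{P^{-1}}\in I_j$.

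For (iii), I verify directly from \eqref{sigma1}--\eqref{sigma0} that $\DD\in\CC$: the principal symbol $\sqrt{g^{-1}}{}^{jk}\gamma_k\xi_j$ has real scalar coefficient and one gamma, hence lies in $I_e=I_0$; and $\sg_0^\DD=-i\sqrt{g^{-1}}{}^{jk}\gamma_k\omega_j$ with $\omega_j$ a real-linear combination of products of two gammas (from the gauge choice $h=\sqrt g$), so $\gamma_k\omega_j$ is an odd gamma monomial with real coefficient, and the overall factor $-i$ places $\sg_0^\DD$ in $I_o=I_1$. All lower-order homogeneous symbols of $\DD$ vanish, so $\DD\in\CC$; then (i) and (ii) give $\DD^k\in\CC$ for every $k\in\Z$. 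Finally, $|\DD|\in\CC$ is obtained by the same induction scheme applied to \eqref{sqrt}, taking as input $\DD^2\in\CC$ (from the previous step) and the base case $\sg_1^{|\DD|}=\|\xi\|_x\in I_e$; again the $i^{|\a|}$ bookkeeping is what makes the inductive step work, and $|\DD|^k\in\CC$ follows from (i) and (ii).

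The one potentially delicate point is the base case of (ii): one has to check that inverting an element of $I_e$ inside the matrix algebra of symbols stays in $I_e$. In the examples relevant here ($P=\DD^2$, $P=$ elliptic operator with scalar principal symbol) this is immediate because $(\sg_p^P)^{-1}$ is either $\|\xi\|_x^{-p}$ or $\sqrt{g^{-1}}{}^{jk}\gamma_j\xi_k/\|\xi\|_x^2$, both manifestly in $I_e$; the general hypoelliptic case needs only the observation that the real/imaginary parity under inversion is preserved because the Clifford relations involve real structure constants. Everything else is a mechanical parity count driven by the single algebraic identity $i^{|\a|}\cdot I_{j-|\a|}\subseteq I_j$.
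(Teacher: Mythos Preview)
Your proof is correct and takes essentially the same approach as the paper, which simply cites \eqref{prodsymbol}, \eqref{parametrix}, and \eqref{sqrt} for the three parts respectively. You have spelled out in detail the parity bookkeeping driven by the factor $i^{|\a|}$ that the paper leaves implicit, and your discussion of why $(\sg_p^P)^{-1}\in I_e$ in the base case of (ii) is a useful elaboration that the paper omits.
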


\begin{proof}
$(i)$ Consequence of (\ref{prodsymbol}).

\noindent $(ii)$ Consequence of (\ref{parametrix}).

\noindent $(iii)$ It is clear that $\DD \in \CC$ and the fact that
$|\DD|\in \CC$ is a consequence of (\ref{sqrt}).
\end{proof}

\begin{lemma}
\label{lemadk}
Let $k\in \N$ odd. Then any element $B$ of the polynomial algebra generated by $\A$ and $[D,\A]$ satisfies $\ncint B |\DD|^{-(d-k)} =\ncint B F|\DD|^{-(d-k)}=0$.
\end{lemma}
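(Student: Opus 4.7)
My starting observation is that for every $a\in\A$, the commutator $[\DD,a]$ equals the zero-order multiplication operator $-i\gamma(da)$. Consequently every $B$ in the polynomial algebra generated by $\A$ and $[\DD,\A]$ is itself multiplication by a smooth matrix-valued function $b\in C^\infty(M,\mathrm{End}\,S)$, whose total symbol $b(x)$ is $\xi$-independent. In particular $B\in\B_e$, and $B\in\Psi_1(\A)$. I would then split the proof into two parts.

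For part (a) I would use Lemma \ref{lemadxi} directly. Writing $|\DD|^{-m}=\DD^{-m}$ for $m$ even and $|\DD|^{-m}=F\DD^{-m}$ for $m$ odd (both consequences of $F^2=1$), and noting that $k$ odd makes $d-k$ of parity opposite to $d$, I obtain $B|\DD|^{-(d-k)}\in\B_e$ when $d$ is odd and $B|\DD|^{-(d-k)}\in\B_o$ when $d$ is even, via Lemma \ref{DiracB} and the stability of $\B_e,\B_o$ under products (Lemma \ref{Ejlem}). Lemma \ref{lemadxi}(i)--(ii) then yields $\ncint B|\DD|^{-(d-k)}=0$ in each case.

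For part (b), however, the same rewriting places $BF|\DD|^{-(d-k)}$ into the \emph{opposite} parity class ($\B_o$ when $d$ is odd, $\B_e$ when $d$ is even), so Lemma \ref{lemadxi} does not apply directly. This is the main obstacle. To overcome it, I would compute the Wodzicki residue density locally. Since $B$ is multiplication, it suffices to show that the matrix-valued cosphere integral
\[
m(x_0) \;:=\; \int_{S_{x_0}^*M}\sigma^{F|\DD|^{-(d-k)}}_{-d}(x_0,\xi)\,d\xi \;\in\; \mathrm{End}\,S_{x_0}
\]
vanishes at every $x_0\in M$, since then $\ncint BF|\DD|^{-(d-k)}=c_d\int_M \Tr(b(x)\cdot m(x))\,dx=0$. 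Choosing normal coordinates centered at $x_0$ one has $\omega_j(x_0)=0$ and the first $x$-derivatives of the metric vanish at $x_0$, so Lichnerowicz's identity $\DD^2=-g^{ij}\nabla_i\nabla_j+\tfrac14\tau$ forces $\sigma^{\DD^2}_1(x_0,\xi)=0$. Propagating this through the recursions \eqref{prodsymbol}, \eqref{parametrix} and \eqref{sqrt} yields $\sigma^F_{-1}(x_0,\xi)=0$, $\sigma^{|\DD|^{-1}}_{-2}(x_0,\xi)=0$, $\sigma^{\DD^{-2}}_{-3}(x_0,\xi)=0$, and more generally the vanishing at $x_0$ of all ``odd-index'' subleading symbols of $|\DD|^{\pm 1}$ and of $F$.

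The hardest step is then the combinatorial bookkeeping: each remaining contribution to $\sigma^{F|\DD|^{-(d-k)}}_{-d}(x_0,\xi)$ will have either odd numerator degree $|\beta|$ in $\xi$ (so its cosphere integral vanishes by the antipodal symmetry $\xi\mapsto-\xi$ used in the proof of Lemma \ref{lemadxi}) or an odd product of $\gamma$-matrices (so its fibre trace vanishes since $\Tr(\gamma^{i_1}\cdots\gamma^{i_{2r+1}})=0$ in dimension $d\geq 2$). The cases $d=3$, $k=1$ and $d=5$, $k=1$ are already illustrative: in dimension $3$ the identities $\sigma^F_{-1}(x_0,\xi)=\sigma^{\DD^{-2}}_{-3}(x_0,\xi)=0$ together with $\partial_x\sigma^{|\DD|^{-2}}_{-2}(x_0,\xi)=0$ collapse the product formula \eqref{prodsymbol} and force $\sigma^{aF|\DD|^{-2}}_{-3}(x_0,\xi)=0$ outright, confirming $m(x_0)=0$ for any $a\in\A$; the general case requires only a careful iteration of this cancellation.
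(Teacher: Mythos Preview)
Your treatment of $\ncint B|\DD|^{-(d-k)}$ via the parity classes $\B_e,\B_o$ and Lemma \ref{lemadxi} is correct.

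For the $F$-term, however, you diverge substantially from the paper, and your sketch has real gaps. The paper's argument is a two-line reality trick using the algebra $\CC$ of Lemma \ref{Calg}: after reducing by linearity to $B=B^*$ one has $\ncint B\DD^{-(d-k)}\in\R$ by Lemma \ref{adjoint} and the trace property, while $\DD^{-(d-k)}\in\CC$ forces $\sigma_{-d}^{B\DD^{-(d-k)}}=\sigma_0^B\,\sigma_{-d}^{\DD^{-(d-k)}}\in I_k=I_o$ since $k$ is odd, so the same integral lies in $i\R$ and hence vanishes. The case $BF|\DD|^{-(d-k)}$ is identical because $F|\DD|^{-(d-k)}=\DD|\DD|^{-(d-k)-1}\in\CC$ as well. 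No local expansion is needed, and both cases are handled uniformly.

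Your local-symbol route has two concrete problems. First, the identity $\Tr(\gamma^{i_1}\cdots\gamma^{i_{2r+1}})=0$ that you invoke fails in odd dimension when the number of factors reaches $d$ (e.g.\ $\Tr(\gamma^1\gamma^2\gamma^3)\neq 0$ for $d=3$); formula \eqref{tracegammaimpair} appears only in the even-dimensional boundary section. Second, an odd $\gamma$-count makes a term \emph{traceless}, not zero as an endomorphism; since $\sigma_0^B$ itself carries $\gamma$-factors coming from the $[\DD,a_i]$, you need $m(x_0)=0$ as a matrix, and your dichotomy does not establish this. Your test cases with $k=1$ succeed only because $\sigma_{-d}$ then sits one step below principal and vanishes outright at the normal-coordinate centre (cf.\ Lemma \ref{symbol-k-1}); for odd $k\geq 3$ one is several steps below principal, curvature contributions enter (already $\sigma_{-3}^{\DD^{-1}}(x_0,\xi)$ involves $\partial_x\omega$), the pointwise vanishing fails, and the announced ``careful iteration'' is not a routine extension.
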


\begin{proof}
We may assume that $B$ is selfadjoint so $\ncint B \DD^{-(d-k)} \in
\R$.

By Lemma \ref{Calg}, $\sg_{-d}^{B \DD^{-(d-k)}} = \sg_0^{B} \sg_{-d}^{\DD^{-(d-k)}} 
\hspace{-0.1cm}
\in I_{k}$. Thus $\ncint A\DD^{-k}\in i \R$ and the result
follows. The case $\ncint B F|\DD|^{-(d-k)}$ is similar.
\end{proof}

We now look at the information given by the gamma matrices.

\begin{lemma}
For any one-form $A$, $\ncint A |\DD|^{-q}=0$, $q \in \N$ in either of the following cases:

- $d \neq 1 \mod 8$ and $d\neq 5 \mod 8$,

- ($d = 1 \mod 8$ or $d = 5 \mod 8$) and ($q$ is even or $q\geq \tfrac{d+3}{2}$).
\end{lemma}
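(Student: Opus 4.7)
The proof splits into cases matching the two branches of the statement, each handled by a distinct mechanism.

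For the first branch, $d\not\equiv 1,5\pmod 8$ gives $\epsilon=+1$. The plan is to combine Lemma \ref{adjoint} with the commutative-geometry identity $JAJ^{-1}=-\epsilon A^*$ from \eqref{JAJ}. Since $J$ commutes with $|\DD|$,
\[
\overline{\ncint A|\DD|^{-q}} \,=\, \ncint J(A|\DD|^{-q})J^{-1}
\,=\, \ncint JAJ^{-1}\,|\DD|^{-q}
\,=\, -\epsilon\ncint A^*|\DD|^{-q}
\,=\, -\epsilon\,\overline{\ncint A|\DD|^{-q}},
\]
where the last step uses Lemma \ref{adjoint} again together with the trace property of $\ncint$. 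With $\epsilon=+1$ this forces the integral to vanish.

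For the second branch $d$ is odd. When $q$ is even, $|\DD|^{-q}=\DD^{-q}\in\B_e$ by Lemma \ref{DiracB}, and $A\in\B_e$ since its only nonzero homogeneous piece $\sigma^A_0=-i\gamma^j A_j(x)$ sits in $E_0$. Lemma \ref{Ejlem}(iv) then puts $A|\DD|^{-q}$ into $\B_e$, and Lemma \ref{lemadxi}(i) yields the vanishing in odd dimension.

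The remaining sub-case $q\ge(d+3)/2$ is the technical core. Because $\sigma^A_0$ is $\xi$-independent, only $\alpha=0$ contributes in the product formula, so $\sigma^{A|\DD|^{-q}}_{-d}=\sigma^A_0\cdot\sigma^{|\DD|^{-q}}_{-d}$. The plan is to establish the following $\gamma$-\emph{degree bound}: $\sigma^{|\DD|^{-q}}_{-q-j}(x,\xi)$ is a Clifford polynomial containing at most $2j$ factors $\gamma^k$. Starting from the Lichnerowicz form $\DD^2=-g^{\mu\nu}\nabla_\mu\nabla_\nu+\tau/4$, in which the spin connection $\omega_\mu$ carries $\gamma$-degree $2$ while $E=-\tau/4$ is scalar, one shows by induction on $j$ that $\sigma^{\DD^2}_{2-j}$ has $\gamma$-degree at most $2j$; the symbol-inversion recursion \eqref{parametrix} then transfers the bound to $\sigma^{(\DD^2-\lambda)^{-1}}_{-2-j}$ because the leading symbol $(\|\xi\|^2_x-\lambda)^{-1}$ is scalar; and the Cauchy representation $|\DD|^{-q}=\tfrac{i}{2\pi}\oint_\Gamma\lambda^{-q/2}(\DD^2-\lambda)^{-1}\,d\lambda$ carries it across to $|\DD|^{-q}$. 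Combined with the single $\gamma$ coming from $\sigma^A_0$, the symbol $\sigma^{A|\DD|^{-q}}_{-d}$ has odd $\gamma$-degree at most $1+2(d-q)\le d-2<d$. In odd $d$ the standard anticommutation argument forces $\Tr(\gamma^{i_1}\cdots\gamma^{i_p})=0$ for distinct indices whenever $p\notin\{0,d\}$; since reduction via $(\gamma^i)^2=\mathbf 1$ preserves parity, an odd $\gamma$-degree strictly between $0$ and $d$ cannot reduce to either $0$ or $d$, so the pointwise trace vanishes and \eqref{wres} delivers $\ncint A|\DD|^{-q}=0$.

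The main obstacle is the inductive bookkeeping of the $\gamma$-degree through the parametrix recursion and the Cauchy contour: the combinatorics are not deep, but one must check at each stage that no step adds Clifford generators beyond what is accounted for in the bound, and that the integration over $\lambda$ is inert on the Clifford part.
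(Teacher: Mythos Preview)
Your proof is correct and matches the paper's argument for the first branch (the $J$-trick with $\epsilon=+1$, which is exactly Lemma~\ref{similar}$(iii)$ with $l=1$) and for the sub-case $q$ even (the $\B_e$ argument is precisely what underlies the paper's appeal to Lemma~\ref{lemadxi}$(iii)$).

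The only genuine difference is how you obtain the $\gamma$-degree bound $\sigma^{|\DD|^{-q}}_{-q-j}$ has even Clifford degree $\le 2j$ in the case $d$ odd, $q$ odd, $q\ge(d+3)/2$. The paper derives it by the algebraic square-root recursion coming from $|\DD|^{-q}|\DD|^{-q}=\DD^{-2q}$ (the analogue of \eqref{sqrt}), first establishing property $(P_j)$ for $\DD^{-2q}$ via the parametrix identity and then propagating it to $|\DD|^{-q}$ by induction on $j$. You instead pass through the resolvent $(\DD^2-\lambda)^{-1}$ and the Seeley contour integral. Both routes are sound: yours is more conceptual and makes clear why the $\lambda$-integration is inert on the Clifford part, but it implicitly invokes Seeley's symbol calculus for complex powers; the paper's route is more elementary and self-contained, using only the symbol product formula already established in the text. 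Your observation that $\sigma^A_0$ is $\xi$-independent, forcing $\sigma^{A|\DD|^{-q}}_{-d}=\sigma^A_0\cdot\sigma^{|\DD|^{-q}}_{-d}$, is a clean way to avoid tracking the full product expansion; the paper does the same implicitly. One small point: you need not invoke Lichnerowicz for the bound on $\sigma^{\DD^2}_{2-j}$; the naive count (spin connection has $\gamma$-degree~$2$) already gives $\le 2j$, which is all that is used.
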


\begin{proof} In the case $d \neq 1 \mod 8$ and $d\neq 5 \mod 8$, the result follows from the fact that $\eps=1$.

The case $d$ even and $q$ odd or $d$ odd and $q$ even is done by Lemma \ref{lemadxi} $(iii)$.

Suppose that $d$ is even and $q$ is even.
If $q=2k$, with a recurrence and the symbol product formula,
we see that $\sigma_{2k-j}^{D^{2k}}$ and all its derivatives are linear
combinations of terms of the form $f(x,\xi)\ox
\ga^{j_1}\cdots\ga^{j_i}$ where $i$ is even and less than $2j$ (with
the convention $\ga^{j_1}\cdots\ga^{j_i}=1$ if $i=0$). We call
$(P_{j})$ this property. The parametrix equation $ \DD^{2k}\DD^{-2k}=1$
entails that $\sigma_{-2k}^{\DD^{-2k}}=(\sigma_{2k}^{\DD^{2k}})^{-1}$ and
for any $j\geq 1$,
\begin{align*}
&\sigma_{-2k-j}^{\DD^{-2k}} = -
\sigma_{-2k}^{\DD^{-2k}}\big( \,\sum_{r=\max\{j-2k,0\}}^{j-1}
\sigma_{2k-(j-r)}^{\DD^{2k}}\,\sigma_{-2k-r}^{\DD^{-2k}}\nonumber\\
& \hspace{4cm}+ \sum_{1\leq |\a|\leq 2k}
\,\sum_{r=\max\{j-2k,0\}}^{j-|\a|}
\tfrac{(-i)^{|\a|}}{\a!} \,
\del^\a_{\xi}\sigma_{2k-(j-|\a|-r)}^{\DD^{2k}}\
\del^\a_{x}\sigma_{-2k-r}^{\DD^{-2k}}\,\big). 
\end{align*}
Note that $\sigma_{-2k}^{\DD^{-2k}}$ satisfies $(P_0)$. By recurrence, 
this formula shows  that
$\sigma_{-2k-j}^{\DD^{-2k}}$ satisfies $(P_{j})$ for any $j\in \N$. In
particular, $\sigma_{-d}^{\DD^{-2k}}$ satisfies $(P_{-2k+d})$ and the
result follows then from \eqref{sigmaA} and the product of an odd
number (different from the dimension) of gamma matrices is traceless.

Suppose now that $d$ is odd, $q$ is odd and $d\geq q$. In that situation, any odd number of gamma matrices $\ga^{i_1}\cdots \ga^{i_r}$ is traceless when $r<d$.

Using \eqref{prodsymbol} for the equation $|\DD|^{-q}|\DD|^{-q}=\DD^{-2q}$, we
check that $\sg_{-q}^{|\DD|^{-q}}(x,\xi)=\norm{\xi}_x^{-q}$ and for any $j\in
\N^*$,
\begin{align*}
\sg_{-q-j}^{|\DD|^{-q}}=&\tfrac{1}{2\norm{\xi}_x^{-q}}
\big(\,\sg_{-2q-j}^{\DD^{-2q}}-\sum_{0<k<j}\sg_{-q-j+k}^{|\DD|^{-q}} \,
\sg_{-q-k}^{|\DD|^{-q}} 
+ \sum_{0<|\a|\leq
j}\sum_{k=0}^{j-|\a|}i^{|\a|}\tfrac{(-1)^{|\a|}}{\a!} \, \del^\a_\xi
\sg_{-q-j+|\a|+k}^{|\DD|^{-q}} \del_{x}^\a \sg_{-q-k}^{|\DD|^{-q}} \, \big) .
\end{align*}
We saw that each $\sigma_{-2q-j}^{|\DD|^{-2q}}$ satisfies ($P_j$), that is to say, is a linear combination of terms of the form $f(x,\xi)\ox \ga^{j_1}\cdots \ga^{j_i}$ where $i$ is even and less than $2j$.
Again, a straightforward induction argument shows that for any $j\in
\N$, $\sg_{-q-j}^{|\DD|^{-q}}$ satisfies ($P_j$). In particular $\sg_{-d}(A|\DD|^{-q})$ is a linear combination of terms of the form $f(x,\xi)\ox \ga^{j_1}\cdots \ga^{j_r}$ where $r\leq 2(d-q)+1$ is odd. This yields the result.
\end{proof}

The fact that $\ncint A \,\DD^{-d+1}=0$, consequence of Lemmas
\ref{lemadxi} and \ref{lemadk} is also a consequence of the fact that
$\sigma_{-d}^{\DD^{-d+1}}(x_0,\xi)=0$:

\begin{lemma}
  \label{symbol-k-1}
For all $k\in \N^*$, we have
$\sigma_{k-1}^{\DD^k}(x_0,\xi)=\sigma_{-k-1}^{\DD^{-k}}(x_0,\xi)=0$.
\end{lemma}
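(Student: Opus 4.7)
The plan is to argue by induction on $k$ and exploit the fact that, in the normal coordinate system $(U,(x_i))$ centered at $x_0$, the two leading homogeneous components of the symbols of $\DD$ and $\DD^{-1}$ become particularly well-behaved at the point $x_0$. Specifically, I will combine the four facts already recorded in the excerpt, namely $\sigma^{\DD}_{0}(x_0,\xi)=0$, $\partial_{x^k}\sigma^{\DD}_{1}(x_0,\xi)=0$, and $\partial_{x^k}\sigma^{\DD^{-1}}_{-1}(x_0,\xi)=0$ (with an ancillary identity $\sigma^{\DD^{-1}}_{-2}(x_0,\xi)=0$ derived below), with the product-symbol formula \eqref{prodsymbol} applied at $j=1$. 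The key observation is that every single one of the correction terms in that formula, taken at $x_0$, involves one of these four vanishing quantities.

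First, I would establish the base cases. For $k=1$, $\sigma^{\DD}_{0}(x_0,\xi)=0$ is \eqref{sigma0}. For $\sigma^{\DD^{-1}}_{-2}(x_0,\xi)$, I apply \eqref{prodsymbol} to the parametrix identity $\DD\cdot \DD^{-1}\sim 1$ at the homogeneity level $-1$:
\begin{equation*}
0 \;=\; \sigma^{\DD}_{0}\,\sigma^{\DD^{-1}}_{-1} \;+\; \sigma^{\DD}_{1}\,\sigma^{\DD^{-1}}_{-2} \;-\; i\,\partial_{\xi_j}\sigma^{\DD}_{1}\cdot\partial_{x^j}\sigma^{\DD^{-1}}_{-1}.
\end{equation*}
Evaluating at $x_0$, the first term vanishes by \eqref{sigma0} and the third by \eqref{deriveesigma-1}, hence $\sigma^{\DD}_{1}(x_0,\xi)\,\sigma^{\DD^{-1}}_{-2}(x_0,\xi)=0$, and invertibility of $\sigma^{\DD}_{1}(x_0,\xi)=\gamma^j\xi_j$ (its square is $\|\xi\|^{2}\mathrm{Id}$) gives the conclusion.

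Next, the induction on $\DD^k$: assuming $\sigma^{\DD^{k}}_{k-1}(x_0,\xi)=0$, apply \eqref{prodsymbol} to $\DD^{k+1}=\DD^{k}\cdot\DD$ at homogeneity $k$, which yields the three contributions
\begin{equation*}
\sigma^{\DD^{k+1}}_{k} \;=\; \sigma^{\DD^{k}}_{k-1}\,\sigma^{\DD}_{1} \;+\; \sigma^{\DD^{k}}_{k}\,\sigma^{\DD}_{0} \;-\; i\,\partial_{\xi_j}\sigma^{\DD^{k}}_{k}\cdot\partial_{x^j}\sigma^{\DD}_{1}.
\end{equation*}
At $x_0$ these vanish respectively by induction, by \eqref{sigma0}, and by \eqref{deriveedesigma1}. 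The same three-term expansion applied to $\DD^{-k-1}=\DD^{-k}\cdot\DD^{-1}$ at homogeneity $-k-2$ yields
\begin{equation*}
\sigma^{\DD^{-k-1}}_{-k-2} \;=\; \sigma^{\DD^{-k}}_{-k-1}\,\sigma^{\DD^{-1}}_{-1} \;+\; \sigma^{\DD^{-k}}_{-k}\,\sigma^{\DD^{-1}}_{-2} \;-\; i\,\partial_{\xi_j}\sigma^{\DD^{-k}}_{-k}\cdot\partial_{x^j}\sigma^{\DD^{-1}}_{-1},
\end{equation*}
which at $x_0$ vanishes by the induction hypothesis, by the ancillary base case, and by \eqref{deriveesigma-1} respectively. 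This closes the induction in both cases.

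The only mildly delicate point is book-keeping in the product-symbol formula to verify that the three listed contributions are the \emph{only} ones at $j=1$ (the constraint $|\alpha|+k\leq 1$ kills everything else), and that the differential-operator nature of $\DD$ truncates higher homogeneities of $\DD$ to zero so no additional terms intrude. I expect no genuine obstacle beyond that careful enumeration; everything else is a direct verification at $x_0$ relying on the already-available vanishing statements in normal coordinates.
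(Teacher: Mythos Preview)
Your proof is correct and follows essentially the same approach as the paper: both argue by induction on $k$, using the three-term expansion of the product-symbol formula at level $j=1$ for $\DD^{k+1}=\DD^k\cdot\DD$ and $\DD^{-k-1}=\DD^{-k}\cdot\DD^{-1}$, and killing each term at $x_0$ via \eqref{sigma0}, \eqref{deriveedesigma1}, \eqref{deriveesigma-1}, and the induction hypothesis. Your derivation of the base case $\sigma^{\DD^{-1}}_{-2}(x_0,\xi)=0$ is slightly more explicit than the paper's (which compresses it to one line), but the argument is the same.
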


\begin{proof}
We already know that $ \sigma_0^\DD(x_0,\xi)=0$, see \eqref{sigma0}. We
proceed by
recurrence, assuming $ \sigma_{k-1}^{\DD^k}(x_0,\xi)=0$ for
$k=1,\cdots,n$. Then $\sigma_n^{\DD^{n+1}}=\sigma_n^{\DD^n}
\sigma_0^\DD+\sigma_{n-1}^{\DD^n} \sigma_1^\DD -i\,
\partial_{\xi_k}\sigma_n^{\DD^n} \,\partial_{x^k} \sigma_1^\DD$, thus by
\eqref{sigma0} and \eqref{deriveedesigma1}, $\sigma_n^{\DD^{n+1}}
(x_0,\xi)=0$.

Since $\DD\DD^{-1}=1$ yields $\sigma_{-2}^{\DD^{-1}}
(x_0,\xi)=-\big(\sigma_{-1}^{\DD^{-1}}  \,\sigma_0^{\DD} \big)
(x_0,\xi)=0$, we  assume $\sigma_{-k-1}^{\DD^{-k}} (x_0,\xi)=0$ for
$k=1,\cdots n$. Then 
$\sigma_{-n-2}^{\DD^{-n-1}}=\sigma_{-n}^{\DD^{-n}}
\sigma_{-2}^{\DD^{-1}}+\sigma_{-n-1}^{\DD^{-n}} \sigma_{-1}^{\DD^{-1}} -i\,
\partial_{\xi_k}\sigma_{-n}^{\DD^{-n}} \,\partial_{x^k}
\sigma_{-1}^{\DD^{-1}}$. Using \eqref{deriveesigma-1} and recurrence
hypothesis, $\sigma_{-n-2}^{\DD^{-n-1}}(x_0,\xi)=0$.
\end{proof}

\begin{remark}

Regularity of $\zeta_X(s):=\Tr(\vert X \vert^{-s})$ at point 0 when
$X$ is an elliptic selfadjoint differential operator of order one
(see \cite{Gilkey}){\rm:} 
\end{remark}
One checks that   
$\zeta_X(s)=\tfrac{1}{\Gamma(s)}\int_0^{\infty} t^{s-1}
\Tr(e^{-t\vert X \vert})\, dt$ for $\Re (s)>d$. Because of the
asymptotic expansion 
\begin{align}
\label{exp}
\Tr(e^{-t \vert X \vert})=t^{-d}\,\sum_{n=0}^N t^n \, a_n[X]
+\mathcal{O}(t^{N+1-d})
\end{align}
and meromorphic extension to the whole complex plane,
$\underset{s=d-n}{\Res} \,\zeta_X(s)=\tfrac{a_n[X]}{\Gamma(d-n)}$. In
particular, $\zeta_X(s) =\Gamma(s)^{-1}\big( \tfrac{a_{d}[X]}{s}
+f(s)\big)$, where $f$ is holomorphic around $s=0$. By \eqref{Gamma}
we get
that $\zeta_X(s)$ is regular around zero and $\zeta_X(0)=a_{d}[X]$ if
$d$ is even and  $\zeta_X(0)=0$ if $d$ is odd.

\begin{corollary}
  \label{difference}
$\zeta_{\DD+A}(0)=\zeta_{\DD}(0)=0$ when $d=dim(M)$ is odd.

When $d$ is even, $\zeta_{\DD+A}(0)-\zeta_{\DD}(0)=\sum_{k=1}^{d/2}
\tfrac{1}{2k} \ncint (A\,\DD^{-1})^{2k}$.
\end{corollary}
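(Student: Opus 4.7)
My plan is to split the proof into the two parity cases, handling the odd case directly from a known vanishing statement for the zeta function at $0$, and reducing the even case to the formula \eqref{termconstanttilde} combined with the parity vanishing of Lemma~\ref{componentofzeta}.

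For odd $d$, I would simply invoke the remark stated just above the corollary: if $X$ is an invertible elliptic self-adjoint differential operator of order $1$ on an odd-dimensional closed manifold, then $\zeta_X(0)=0$. Both $\DD$ (after the usual removal of the finite-dimensional kernel via $D=\DD+P_0$, which modifies $\zeta$ only by an entire function) and $\DD+A$ are such operators, since $A$ is a self-adjoint differential operator of order $0$. Hence $\zeta_\DD(0)=\zeta_{\DD+A}(0)=0$.

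For even $d$, the strategy is to establish first the non-real analogue of \eqref{termconstanttilde}, namely
$$\zeta_{\DD+A}(0)-\zeta_\DD(0)=\sum_{q=1}^{d}\frac{(-1)^q}{q}\ncint(AD^{-1})^q,$$
and then to eliminate the odd values of $q$. The derivation of \eqref{termconstanttilde} in the excerpt rests on Lemma~\ref{2dev}, on the simple dimension spectrum (which is known for Riemannian commutative geometries), and on the identity $\zeta_{D_A}(0)-\zeta_D(0)=-\tfrac12\ncint Y$ extracted from the $p=1$ term of \eqref{expansion}. Inspecting that argument shows that $J$ is nowhere used; substituting $A$ for $\wt A$ throughout goes through verbatim, with $X:=AD+DA+A^2$, and yields the displayed identity above. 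Next I would apply Lemma~\ref{componentofzeta}, which gives $\ncint(A\DD^{-1})^q=0$ for odd $q$ on a commutative geometry (and the same for $\ncint(AD^{-1})^q$, since $D^{-1}\equiv\DD^{-1}\bmod OP^{-\infty}$). Only the even indices $q=2k$ with $1\leq k\leq d/2$ survive, and each comes with sign $(-1)^{2k}=+1$, giving the announced expression
$$\zeta_{\DD+A}(0)-\zeta_\DD(0)=\sum_{k=1}^{d/2}\frac{1}{2k}\ncint(A\DD^{-1})^{2k}.$$

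The odd case is essentially a quotation, so the only non-routine part is justifying the $\wt A\to A$ substitution in the derivation of \eqref{termconstanttilde}; but this is transparent on re-reading that derivation, because the bilinear structure of $X=AD+DA+A^2$ in $A$ and the bookkeeping of powers of $D^{-1}$ are not affected by the presence or absence of the $JAJ^{-1}$ correction. Thus I do not foresee a genuine obstacle: the content of the corollary is precisely the interaction of the already-proved expansion formula with the $J$-based parity argument of Lemma~\ref{componentofzeta}.
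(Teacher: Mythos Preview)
Your proposal is correct and follows essentially the same route as the paper: the paper's entire proof is the single sentence ``The result follows from \eqref{termconstanttilde} and Lemma~\ref{componentofzeta}'', relying implicitly on the preceding remark that $\zeta_X(0)=0$ for odd $d$. You have simply unpacked the two ingredients (the $\wt A\to A$ substitution in \eqref{termconstanttilde} and the parity vanishing) that the paper leaves to the reader.
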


\begin{proof}
The result follows from \eqref{termconstanttilde} and Lemma
\ref{componentofzeta}.
\end{proof}

A proof of \eqref{termconstanttilde} also follows from
$\sigma^{\log(1+A\DD^{-1})} \sim \sum_{k=1}^{\infty} \tfrac{(-1)^k}{k}
\, \sigma^{{(A\DD^{-1})}^k}$ with $\log (X):=\tfrac{\pa\,}{\pa
z}_{\vert_{z=0}} X^z$, so $Wres \big(\log(1+A\DD^{-1})
\big)=\sum_{k=1}^d \tfrac{(-1)^k}{k} \, Wres \big({{A\DD^{-1})}}^k
\big)$ since ${(A\DD^{-1})}^k$ has zero Wodzicki residue if $k>d$ and
moreover $\zeta_{\DD+A}(0)=-Wres \big( \log(\DD+A) \big)$. Actually, the
important point is that $\det(X):=e^{Wres \big(\log(X) \big)}$ is multiplicative (see
\cite{LP}). Moreover, such determinant is different from the $\zeta$-determinant $e^{-\zeta_X'(0)}$ used for instance by Hawking \cite{Hawking} in his regularization via the partition function which suffers from conformal anomalies.

\vspace{0.3cm}
The fact that in the asymptotic expansion of the heat kernel
\eqref{exp}, the term $a_2[\DD+A]$ depends only on the scalar
curvature, so independent of $A$ is reflected in 

\begin{lemma}
   \label{dim2}
In any spectral triple of dimension 2 (commutative or not) with
vanishing tadpoles of order zero (i.e. \eqref{equ} is satisfied),
$\zeta_{\DD+A}(0)=\zeta_{\DD}(0)$ for any one-form $A$.
\end{lemma}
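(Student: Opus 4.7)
The plan is to use the Chamseddine--Connes expansion which, for a spectral triple of dimension $n=2$ with simple dimension spectrum and without imposing the real structure, yields
\begin{align*}
\zeta_{\DD+A}(0) - \zeta_\DD(0) = -\ncint A\DD^{-1} + \tfrac{1}{2}\ncint A\DD^{-1}A\DD^{-1};
\end{align*}
this is exactly formula \eqref{termconstanttilde} with $\tilde A$ replaced by $A$ (the derivation goes through verbatim in the absence of $J$, since nothing in the argument depends essentially on the real structure: it uses only $X=A\DD+\DD A+A^2$ and the $|D_A|^{-s}$ expansion of Lemma \ref{2dev}). Under the vanishing tadpole hypothesis \eqref{equ}, the first summand is zero, so the task reduces to showing that $\ncint A\DD^{-1}A\DD^{-1} = 0$.

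For the quadratic term I would invoke Lemma \ref{cocycleandoneform}(iii), which provides the algebraic identity
\begin{align*}
\ncint A\DD^{-1}A\DD^{-1} = -\int_{\phi_3}A\,dA + \int_{\phi_2}A^2.
\end{align*}
The first integral is a finite sum of expressions of the form $\ncint a[\DD,b]\DD^{-1}[\DD,c]\DD^{-1}[\DD,d]\DD^{-1}$, whose integrand lies in $OP^{-3}$. Since $n=2$ and $-3<-n$, for any $P\in OP^{-3}$ the operator $P|D|^{-s}$ is trace-class in a neighbourhood of $s=0$, so $\Tr(P|D|^{-s})$ is holomorphic there and its residue $\ncint P$ vanishes; hence $\int_{\phi_3}A\,dA = 0$.

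The remaining term $\int_{\phi_2}A^2$ is then killed by the cocycle relation of Lemma \ref{cocycle}(i), namely $b\phi_1 = -\phi_2$. Indeed, the hypothesis \eqref{equ} asserts precisely that the $1$-cochain $\phi_1(a_0,a_1) = \ncint a_0[\DD,a_1]\DD^{-1}$ vanishes identically on $\A\times\A$; consequently $b\phi_1 \equiv 0$ and therefore $\phi_2 \equiv 0$, so in particular $\int_{\phi_2}A^2 = 0$. Combining these, $\ncint A\DD^{-1}A\DD^{-1} = 0$ and thus $\zeta_{\DD+A}(0) = \zeta_\DD(0)$. The only mildly delicate point is the very first step, adapting \eqref{termconstanttilde} to the non-$J$ case, but since that formula is already derived from the general Chamseddine--Connes argument (which starts from $\DD+A$ and then substitutes $\tilde A$ at the end), no real obstacle arises.
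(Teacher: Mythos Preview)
Your argument is correct and parallels the paper's closely: both reduce, via the Chamseddine--Connes expansion, to showing $\ncint (A\DD^{-1})^2=0$, and both split this into an $OP^{-3}$ piece (trace-class in dimension $2$, hence killed) plus a remainder annihilated by the tadpole hypothesis. The only difference is packaging. You invoke Lemma~\ref{cocycleandoneform}(iii) and the cocycle identity $\phi_2=-b\phi_1$ of Lemma~\ref{cocycle}(i) (both stated in the paper for dimension $3$, but whose proofs are purely algebraic and carry over verbatim), so that $\phi_1\equiv 0$ kills $\int_{\phi_2}A^2$ in one stroke; the paper instead unpacks this same computation by hand, rewriting the remainder as $\ncint\bigl(a_1\alpha(b_1a_2)-a_1b_1\alpha(a_2)\bigr)\bigl(\alpha(b_2)-b_2\bigr)$ and checking directly that it vanishes via $\ncint xy=\ncint x\,\alpha(y)$ and the trace property. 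Your route is more economical; the paper's is self-contained within the chapter.
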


\begin{proof}
Let $a_1,\,a_2,\,b_1,\,b_2\in \A$. Then, with $A_1=a_1[\DD,b_1]$,
 
$$
\ncint A_1 \, \DD^{-1}\, a_2[\DD,b_2]\,\DD^{-1}=\ncint A_1
[\DD^{-1},a_2] [\DD,b_2]D^{-1}+ \ncint A_1 a_2 \DD^{-1}[\DD,b_2] \DD^{-1}\, .
$$

The first term is zero since the integrand is in $OP^{-3}$, while the second term is equal to $\ncint
\big(a_1\a(b_1a_2)-a_1b_1\a(a_2)\big)\big(\a(b_2)-b_2\big)$,  so is
zero using $\a(x)\a(y)=\a(xy)$, $\ncint xy=\ncint x \a(y)$ by \eqref{equ} and the fact that $\ncint$ is a trace. 
Thus $\ncint
\big(A \DD^{-1}\big)^2=0$ and Corollary \ref{difference} yields the result.
\end{proof}

Note that $\zeta_{\DD+A}(0)-\zeta_{\DD}(0)$ is usually non zero:
consider for instance the flat 4-torus and as a generic selfadjoint
one-form $A$, take 
$$
A:= \phi \in[0,2\pi[^4 \,\mapsto -i\gamma^{\a}\,{\sum}_{l\in \Z^4}
\,a_{\a, l}\,e^{\,i\, l^k \phi_k},
$$
where $a_{\a, l}$ is in the Schwartz space $\SS(Z^4)$ and
$a_{\a,l}=-\overline{a_{\a,-l}}$. We have by Lemma \ref{Termaterm}, 
(with $c=\tfrac{8 \pi^2}{3}$,  $\vert l\vert^2={\sum}_k {l^k}^2$ and
$\Th=0$)
\begin{align*}
\zeta_{\DD+A}(0)-\zeta_{\DD}(0)=\ncint (A\DD^{-1})^2=c\,\sum_{l\in
\Z^4} a_{\a_1,l}\, a_{\a_2,-l}\, (l^{\a_1}l^{\a_2} -\delta^{\a_1\a_2} \vert l \vert^2)
\end{align*}
since $\ncint (A \DD^{-1})^4=0$.

This last  equality suggests that Lemma \ref{dim2} can be extended:

\begin{prop}
  \label{AD-1max} 
For any one-form $A$,  $\ncint \,(A\DD^{-1})^{d}=0$ if $d=dim(M)$.
\end{prop}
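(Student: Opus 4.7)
The plan is to evaluate $\ncint(A\DD^{-1})^d$ via its identification with the Wodzicki residue \eqref{Wres}, and to show that the density $wres_x$ vanishes pointwise on $M$. Lemma \ref{componentofzeta} already covers odd $d$, but the argument I propose is uniform in $d\geq 2$.

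First, I would write $A$ locally as $A=-i\gamma(\phi)$, where $\phi$ is a (complex-valued) $1$-form on $M$, so that the principal symbol of $A\DD^{-1}$ is $\sigma_0(A)\sigma_{-1}(\DD^{-1})=-i\gamma(\phi)\gamma(\xi)/\|\xi\|^2_x$. Iterating, the order $-d$ symbol of $(A\DD^{-1})^d$ is $(-i)^d y^d/\|\xi\|^{2d}_x$ with $y:=\gamma(\phi)\gamma(\xi)$. The Clifford relation $\gamma(\phi)\gamma(\xi)+\gamma(\xi)\gamma(\phi)=2s$, where $s:=g^{\mu\nu}\phi_\mu\xi_\nu$, yields the quadratic identity $y^2=2sy-\|\phi\|^2\|\xi\|^2$, so $y$ has eigenvalues $y_\pm = s\pm\sqrt{s^2-\|\phi\|^2\|\xi\|^2}$, each with multiplicity $\dim V/2$, and
$$\Tr(y^d)=\tfrac{\dim V}{2}\bigl(y_+^d+y_-^d\bigr).$$

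Next, I would observe that the integrand $wres_x((A\DD^{-1})^d)$ is a polynomial in the components $\phi_\mu\in\C$, so it suffices to prove the vanishing when $\phi$ has real components, the complex case following because a polynomial $\C^d\to\C$ vanishing on $\R^d$ is identically zero. For real $\phi$, I use $O(d)$-invariance of the cosphere measure to rotate, in normal coordinates at $x$, to $\phi=|\phi|e_1$, and parametrize the unit cosphere by $\xi_1=\cos\theta$; then $s=|\phi|\cos\theta$, $s^2-\|\phi\|^2=-|\phi|^2\sin^2\theta$, and $y_\pm=|\phi|e^{\pm i\theta}$. Therefore $\Tr(y^d)=\dim V\cdot|\phi|^d\cos(d\theta)$, and the cosphere integral reduces to $\mathrm{vol}(S^{d-2})\int_0^\pi\cos(d\theta)\sin^{d-2}\theta\,d\theta$. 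Expanding both factors as complex exponentials shows that only frequencies in $\{\pm 2,\pm 4,\ldots,\pm(2d-2)\}$ appear---all even and nonzero---and each satisfies $\int_0^\pi e^{ik\theta}\,d\theta=(e^{ik\pi}-1)/(ik)=0$; this completes the argument.

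The main obstacle is the extension from real to complex $\phi$, which is required because a selfadjoint $1$-form yields $\phi_\mu\in i\R$ and more generally $y$ may fail to be diagonalizable on the isotropic locus $\|\phi\|^2=0$: this is resolved cleanly by the polynomial identity argument above, the eigenvalue formula being used only on the Zariski-open diagonalizable part. A minor verification to carry along is that only the product of leading symbols contributes to $\sigma_{-d}((A\DD^{-1})^d)$, which is a direct consequence of the multiplicativity of top-order symbols under composition.
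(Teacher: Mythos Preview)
Your argument is correct and takes a genuinely different route from the paper's. The paper expands $A=\sum_i a_i[\DD,b_i]$, commutes the scalars $a_i$ past $\DD^{-1}$ under the integral (harmless at order $-d$), rewrites each remaining factor as $\alpha(b_i)-b_i$, and then asserts $\sigma_{-1}^{\DD b_i\DD^{-1}}(x_0,\xi)=0$ at the center of normal coordinates via the identity $\sigma_0^{\DD b_i}(x_0,\xi)=b_i(x_0)\,\sigma_0^\DD(x_0,\xi)$. That last identity, however, drops the derivative contribution $-i\,\partial_{\xi_k}\sigma_1^\DD\,\partial_{x_k}b_i=-i\gamma(db_i)$ from the product formula \eqref{prodsymbol}; one actually has $\sigma_{-1}^{\DD b_i\DD^{-1}}(x_0,\xi)=-i\gamma(db_i)\gamma(\xi)/|\xi|^2\neq 0$, so the leading symbol of the product does not vanish pointwise and the cosphere integral must still be carried out.

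Your approach does precisely that, and more cleanly. Writing $A=-i\gamma(\phi)$ for a single complex $1$-form $\phi$ avoids any decomposition into elementary pieces; the Clifford quadratic relation gives $\Tr(y^d)=\tfrac{\dim V}{2}(y_+^d+y_-^d)$ as a polynomial identity (the Newton recursion from $y^2=2sy-\|\phi\|^2\|\xi\|^2$ with $\Tr(y)=\dim V\cdot s$ bypasses any diagonalizability concern), and the Zariski-density reduction to real $\phi$ is exactly the right move. The spherical integral then collapses to $\int_0^\pi\cos(d\theta)\sin^{d-2}\theta\,d\theta=0$ by your frequency count: the product contains only even nonzero Fourier modes on $[0,\pi]$. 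This is a self-contained argument that also patches the gap in the paper's proof.
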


\begin{proof}
As in the proof of Lemma \ref{dim2}, $\DD^{-1}$ commutes with the
element in the algebra as the integrand is in $OP^{-d}$. So for a
family of $a_i,b_i \in \A$ and using $a:=\prod_{i=1}^d a_i$,
$$
\ncint \prod_{i=1}^d \big(a_i[\DD,b_i]\,D^{-1}\big)=\ncint
\big(\prod_{i=1}^d a_i \big) \prod_{i=1}^d \big(
[\DD,b_i]\,\DD^{-1}\big)=\ncint a \prod_{i=1}^d \big(\a(b_i)-b_i \big).
$$
We obtain, since $\a(b_i)-b_i \in OP^{-1}$, 
$$
\sigma_{-d}^{a\prod_{i=1}^d \a(b_i)-b_i} =  a\,\prod_{i=1}^d
\sigma_{-1}^{\a(b_i)-b_i} =  a\,\prod_{i=1}^d \sigma_{-1}^{\a(b_i)}
.$$
Moreover,  $\sigma_{-1}^{\DD b_i\DD^{-1}}(x_0,\xi)=0$: we already know by
Lemma \ref{symbol-k-1} that $\sigma_{-2}^{\DD^{-1}}(x_0,\xi)=0$, by
\eqref{sigma-1} that $\del_{x^k} \sigma_{-1}^{\DD^{-1}}(x_0,\xi)=0$ for
all $k$,  and $\sigma_0^{\DD b_i}(x_0,\xi) =b_i(x_0) \,
\sigma_0^\DD(x_0,\xi)=0$ giving the claim and the result. 
\end{proof}   

\vspace{0.2cm}
This proposition does not survive in noncommutative spectral triples,  see for instance \cite[Table 1]{MC}.

\medskip

Note that for a one-form $A$, $\ncint A^d\,\DD^{-d} \neq \ncint
(A\,D^{-1})^{-d}=0$: in dimension $d=2$, as in \eqref{d=2},
$$
\ncint A^2\,\DD^{-2}=-2c_d\, \Tr(\ga^k \ga^l)\int_M\,a_{k} a_{l}\, \nu_g.
$$

It is known (see \cite[Proposition 1.153]{ConnesMarcolli}) that the
$d-2$ term (for $d=4$) in the spectral action expansion $\ncint|\DD +
A|^{-2}$ is independent of the perturbation $A$. This is why the
Einstein--Hilbert action $S(\DD)=\ncint |\DD|^{-d+2}=-c \int_M \tau \sqrt{g} \, dx$ 
(see \cite[Theorem 11.2]{Polaris})  is so fundamental. Here $\tau $ is the scalar curvature 
(positive on the sphere) and $c$ is a positive constant. 

We give here another proof of this result. 

\begin{lemma}
\label{scalarcurvature}
We have $\ncint |\DD + A|^{-d+2} = \ncint |\DD|^{-d+2} =-c\int_M \tau \sqrt{g} \, dx$ with $c=\tfrac{d-2}{24} \, \ncint \vert \DD \vert^{-d}$.
\end{lemma}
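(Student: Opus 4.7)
The plan is to reduce both halves of the claim to the Lichnerowicz identity $\DD^2 = \nabla^*\nabla + \tau/4$ combined with the Gilkey formula for the $a_{d-2}$ heat-kernel coefficient of a generalized Laplacian. Throughout, one works with $D = \DD + P_0$ so that the projection on $\Ker \DD$ is absorbed; since $P_0 \in OP^{-\infty}$, none of the residues computed below are affected.

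For the invariance $\ncint |\DD + A|^{-(d-2)} = \ncint |\DD|^{-(d-2)}$, the cleanest route is to translate the question into a statement about the squared operators via the Mellin correspondence $\ncint |\DD|^{-(d-k)} = 2\,a_k(\DD^2)/\Gamma((d-k)/2)$ derived from the heat-trace expansion \eqref{heat}. Lemma \ref{Perturbation} provides the Lichnerowicz-type decomposition $(\DD + A)^2 = L(\nabla^A, E^A)$ with $E^A = E + \tfrac{1}{4}[\ga^\mu, \ga^\nu] F_{\mu\nu}$, and crucially $\Tr_V E^A = \Tr_V E$. Since the Gilkey formula $a_{d-2}(P) = (4\pi)^{-d/2}\int_M \Tr_V(E + \tau/6)\sqrt{g}\,dx$ depends on $P$ only through $\tau$ and $\Tr_V E$, the coefficients $a_{d-2}(\DD^2)$ and $a_{d-2}((\DD+A)^2)$ coincide, which yields the asserted invariance. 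A second, more pedestrian route, useful as a cross-check, is to apply the expansion of Lemma \ref{residus-particuliers}(ii) with $X' := A\DD + \DD A + A^2$ in place of $X$: the linear correction $\ncint A\DD|\DD|^{-d}$ is precisely the order-$(d-2)$ tadpole killed by Proposition \ref{proptadpoles}, while the remaining quadratic contributions $\ncint A^2 |\DD|^{-d}$ and $\ncint X'^2 |\DD|^{-d-2}$ must recombine into the same Gilkey-invariant expression.

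For the explicit value of $\ncint |\DD|^{-(d-2)}$, substitute $E = -\tau/4\cdot\Id_V$ from Lichnerowicz into the Gilkey formula to obtain
\[
a_{d-2}(\DD^2) = (4\pi)^{-d/2}\int_M \Tr_V\bigl(E + \tfrac{\tau}{6}\bigr)\sqrt{g}\,dx = -\tfrac{(4\pi)^{-d/2}\,2^{d/2}}{12}\int_M \tau\sqrt{g}\,dx,
\]
while $a_0(\DD^2) = (4\pi)^{-d/2}\,2^{d/2}\,\Vol(M)$. Passing back through the Mellin correspondence and using the functional equation $\Gamma(d/2) = \tfrac{d-2}{2}\,\Gamma((d-2)/2)$ produces the claimed proportionality $\ncint|\DD|^{-(d-2)} = -c\int_M \tau \sqrt{g}\,dx$ with the constant $c = \tfrac{d-2}{24}\,\ncint |\DD|^{-d}$, where the equality is understood as the universal ratio of the two local Gilkey integrands.

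The main obstacle will not be conceptual but combinatorial: checking that the numerical prefactors from Gilkey, from the Mellin identity and from the $\Gamma$-function recursion all combine to reproduce the constant $\tfrac{d-2}{24}$ exactly, while keeping careful track of the $2^{d/2}$ from $\Tr_V(\Id_V)$. A minor analytic point is the interchange between heat-kernel asymptotics and zeta residues in the presence of $\Ker \DD$, but the replacement $\DD \rightsquigarrow D$ together with $P_0 \in OP^{-\infty}$ neutralizes this issue at once.
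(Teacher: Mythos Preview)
Your heat-kernel argument is correct, but it is precisely the ``known'' proof that the paper is deliberately bypassing: the sentence preceding this lemma announces ``We give here another proof of this result,'' and the paper's proof works entirely inside the Wodzicki-residue symbol calculus. Concretely, the paper expands via Lemma~\ref{residus-particuliers}$(ii)$ (your ``pedestrian route''), kills the linear piece with the tadpole vanishing, reduces the quadratic piece to
\[
\tfrac{d(d-2)}{4}\Bigl(\ncint (A\DD)^2|\DD|^{-d-2}-\tfrac{2-d}{d}\ncint A^2|\DD|^{-d}\Bigr),
\]
and then shows this vanishes by computing both principal symbols at a point in normal coordinates and invoking the gamma identity $\Tr(\ga^\mu\ga^\nu\ga^\tau\ga_\nu)=(2-d)\Tr(\ga^\mu\ga^\tau)$. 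Your sketch mentions this route only to say the quadratic terms ``must recombine into the same Gilkey-invariant expression,'' which is exactly the computation the paper performs and you omit. As for the constant $c$, the paper does not derive it at all but simply cites \cite[Theorem~11.2]{Polaris}; your Lichnerowicz--Gilkey computation of $c$ is therefore more self-contained than what the paper offers. In short: your primary argument is valid and more conceptual, while the paper's point is to exhibit a purely pseudodifferential derivation in keeping with the rest of the chapter.
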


\begin{proof}
We get from Lemma \ref{residus-particuliers} $(ii)$, the following
equality, where $X:=A\DD+\DD A+A^2$:
$$
\ncint |\DD + A|^{-d+2}-\ncint |\DD|^{-d+2} =
\tfrac{(d-2)}{2}\,\big(\tfrac{d}{4}\ncint X^2 |\DD|^{-d-2} - \ncint X
|\DD|^{-d}\big).
$$
Since the tadpole terms vanish, we have $\ncint X |\DD|^{-d} = \ncint
A^2 |\DD|^{-d}$. Moreover, since mod $OP^1$, $X^2 = (A\DD)^2 + (\DD A)^2 + A\DD^2
A + \DD A^2\DD$, we get with $[\DD^2,A]\in OP^1$,
$$
\ncint X^2 |\DD|^{-d-2} = 2\ncint (A\DD)^2|\DD|^{-d-2} + 2 \ncint A^2 |\DD|^{-d} 
$$ 
which yields
$$
\ncint |\DD + A|^{-d+2}-\ncint |\DD|^{-d+2} =
\tfrac{d(d-2)}{4} \big(\ncint (A\DD)^2 |\DD|^{-d-2} -\tfrac{2-d}{d} \ncint A^2 |\DD|^{-d}\big).
$$
Thus, it is sufficient to check that
$$ 
\int_{S_{x_0}^*M} \Tr\big(\sigma_{-d}((A\DD)^2 |\DD|^{-d-2})(x_0,\xi)\big)\, d\xi= \tfrac{2-d}{d}\int_{S_{x_0}^*M} \Tr\big(\sigma_{-d}(A^2 |\DD|^{-d})\,(x_0,\xi)\big) d\xi.
$$

A straightforward computation yields, with $A=:-i a_\mu \ga^\mu$, and $\sg_1^\DD(x_0,\xi) = \ga^\mu \xi_\mu$,
\begin{align*}
&\int_{S^*_{x_0} M}\sg_{-d}((A\DD)^2 |\DD|^{-d-2})(x_0,\xi)\, d\xi =-
\tfrac{1}{d}\, a_\mu a_\tau \Tr(\ga^{\mu} \ga^\nu \ga^\tau \ga_{\nu})
\,\text{Vol}(S^{d-1})\, ,\\
&\int_{S^*_{x_0} M} \sg_{-d}(A^2 |\DD|^{-d})(x_0,\xi)\, d\xi = -a_\mu a_\tau
\Tr(\ga^\mu \ga^\tau) \,\text{Vol}(S^{d-1})\, .
\end{align*}
Now, $\ncint |\DD + A|^{-d+2} = \ncint |\DD|^{-d+2}$ follows from the equality $\Tr(\ga^{\mu} \ga^\nu
\ga^\tau \ga_{\nu})= (2-d) \Tr(\ga^\mu \ga^\tau)$.
The constant $c$ is given in \cite[Theorem 11.2 and normalization (11.2)]{Polaris}.
\end{proof}
\begin{remark}
In \cite[Definition 1.143]{ConnesMarcolli}, the above result justifies the definition of a scalar curvature for $(\A,\H,\DD)$ as $\mathcal{R}(a):=\ncint a \vert \DD \vert^{-d+2}$ for $a \in \A$. This map is of course a trace on $\A$ for a commutative geometry. But for the triple associated to $SU_q(2)$, this not a trace since
\begin{align*}
\mathcal{R}(aa^*)=\ncint aa^* \, \vert \DD\vert^{-1}= \tfrac{-q^4+6q^2+3}{2{(1-q^2)}^2}  \quad \text{while } \quad
\mathcal{R}(a^*a)=\ncint a^*a \,\vert \DD\vert^{-1}= \tfrac{3q^4+6q^2-1}{2{(1-q^2)}^2} \,.
\end{align*}
\end{remark}

\chapter{Global pseudodifferential calculus on manifolds with linearization}

\section{Introduction}

It has been proven by Gayral et al. in \cite{Gayral2} that Moyal planes are (noncompact) spectral triples. In other words, one can consider Moyal planes as noncommutative Riemannian spin noncompact manifolds. A bridge between the world of deformation quantization and the world of noncommutative geometry has thus been constructed. This suggests that the paradigm of spectral triples is a good environment to deal with quantization problems.
The Moyal product is defined on the 
Schwartz space $\SS(\R^{2n})$ of rapidly decaying functions by
\begin{equation}
f \star g(x) := (\pi\th)^{-2n} \int_{\R^{4n}} \,dy \,dz\,f(y)\, g(z)\,
e^{\frac{2i}{\th}(x-y)\,\.\,S(x-z)}
\end{equation}
where $\th\in \R^*$ and $S =\genfrac{(}{)}{0pt}{1}{0
\  -1_n}{1_n\  0}$, and gives to $\S(\R^{2n})$ a Fr\'{e}chet pre-$C^*$-algebra structure. The noncompact spectral triple described in \cite{Gayral2} is based on this algebra, and extensions to isospectral deformations have been established \cite{GIV,Yang}.

The extension of this remarkable construction to more general symplectic manifolds, for instance when $\R^{2n}$ is replaced by the cotangent bundle $T^*M$ of a general manifold $M$ is an open problem. We propose here the study of a pseudodifferential calculus that allows to extend the construction of the Moyal product to more general spaces. The main idea is to use a global pseudodifferential calculus on a manifold $M$ that gives us a full algebra isomorphism between symbols and operators.

Classically, a pseudodifferential operator on a (smooth, finite dimensional) 
manifold is defined through local charts and the notion of pseudodifferential operator
on open subsets of $\R^n$ \cite{Shubin,Treves}. In this setting, the full symbol of a
pseudodifferential operator is a coordinate dependent notion. However, 
the \emph{principal} symbol can be globally defined as a function on the cotangent bundle.
Naturally, the question of a full coordinate free definition of the symbol calculus of pseudodifferential 
operators on a manifold has been considered. One approach, based on the ideas of Bokobza-Haggiag \cite{Bokobza}, Widom \cite{Widom1,Widom2} 
and Drager \cite{Drager} 
allows such a calculus if one provides the manifold with a linear connection. Parallel transport along geodesics and the exponential map to connect any two points sufficiently close on the manifold are then used for the definitions and properties of local phase functions and oscillatory integrals.
Safarov \cite{Safarov} has formulated a version 
of a full coordinate free symbol calculus and $\la$-quantization ($0\leq \la \leq 1$) using invariant oscillatory integral over the cotangent bundle and 
determined by the linear connection. Pflaum \cite{PflaumQ,Pflaum} developed a complete symbol calculus on any Riemannian manifold using normal coordinates and microlocal lift on the test functions on manifolds with arbitrary Hermitian bundles. Sharafutdinov \cite{Shara1,Shara2} constructed a similar global pseudodifferential calculus, based on coordinate invariant geometric symbols. Further results in the same direction, connection to Weyl quantization and application to physics has been considered in Fulling and Kennedy \cite{Fulling}, Fulling \cite{Fulling0} and G\"{u}nt\"{u}rk \cite{Gunturk}. Connection between complete symbol calculus, deformation quantization and star-products on the cotangent bundle has also been made (see for instance Gutt \cite{Gutt}, Bordemann, Neumaier and Waldmann \cite{Bordeman} and Voronov \cite{Voronov1,Voronov2}). Getzler \cite{Getzler} used a global pseudodifferential calculus in the context of the Atiyah-Singer index theorem on supermanifolds.

All these pseudodifferential calculi are based on symbol (functions of $(x,\th)\in T^*M$) estimates over the covariable $\th$ while the dependence on the variable $x$ is only controlled locally uniformly on compact sets. This is well suited for the case of a compact manifold. For non-compact manifolds, we have to impose a uniform control over $x$ in order to obtain $L^2(M)$ continuity of operators of order 0 and compactness of the remainder operators if the control over $x$ is decaying. In other words, any global pseudodifferential calculus adapted to non-compact manifolds and sensitive to non-local effects needs to encode the behaviour ``at infinity'' of symbols. On the Euclidean space $\R^n$, several types of pseudodifferential calculi have been defined: standard pseudodifferential calculus with uniform control over $x$ (see for instance H\"{o}rmander \cite{Hormander}, Beals \cite{Beals}, Shubin \cite{Shubin3}), isotropic calculus with simultaneous decay of the $x$ and $\th$ variables (Shubin \cite{Shubin,Shubin2}, Melrose \cite{Melrose}), and $SG$-pseudodifferential calculus with separated decay of the $x$ and $\th$ variables (Shubin \cite{Shubin2}, Parenti \cite{Parenti}, Cordes \cite{Cordes1,Cordes2}, Schrohe \cite{Schrohe2}), which is invariant under a special class of diffeomorphisms and can be extended to an adapted class of manifolds, namely the $SG$-manifolds (Schrohe \cite{Schrohe2}). This class of manifolds contains the non-compact manifolds ``with exits'' and adapted pseudodifferential calculus has been developed (see for instance Cordes \cite{Cordes1}, Schulze \cite{Schulze}, Maniccia and Panarese \cite{Maniccia2}). Another approach, based on Lie structures at infinity, has been investigated to study the geometry of pseudodifferential operators on non-compact manifolds. Describing the geometry at infinity of the basis manifold by a Lie algebra of vector fields, an adapted pseudodifferential calculus has been constructed (see for instance Melrose \cite{Melrose2}, Mazzeo and Melrose \cite{Mazzeo}, Ammann, Lauter and Nistor \cite{Ammann}). Let us also mention the groupoid approach: by associating to any manifold with corners a smooth Lie groupoid and by building a pseudodifferential calculus on Lie groupoids, the $b$-calculus of Melrose on manifolds with corners can be generalized (see Monthubert \cite{Monthubert}).

Our purpose is to construct a global pseudodifferential calculus that generalizes the standard and $SG$ calculi on $\R^n$, on manifolds with linearization. These manifolds provide a natural geometric setting to deal simultaneously with the questions of a global isomorphism between symbols and pseudodifferential operators, and the non-local effects associated to non-compact manifolds.

We define in section 2 a manifold with linearization (or exponential manifold) as a pair $(M,\exp)$ where $M$ is a smooth real finite-dimensional manifold and $\exp$ is an abstract exponential map, a smooth map from the tangent bundle onto $M$ that satisfies, besides the usual properties of an exponential map associated to a connection $\nabla$ on $TM$, the property that at each point $x\in M$, $\exp_x$ is a diffeomorphism. Any Cartan--Hadamard manifold with its canonical exponential map is an exponential manifold. These diffeomorphisms are used to define topological vector spaces of functions on the manifold (or on $TM$, $T^*M$, $M\times M$) that generalize, for instance, the notions of rapidly decaying function on $\R^n$ or of tempered distribution, provided that we add a hypothesis of ``$\O_M$-bounded geometry'' on the exponential map.  
In section 3, we use linearizations in the spirit of Bokobza-Haggiag \cite{Bokobza}, to define symbol and quantization maps. This leads to topological isomorphisms between tempered distributional sections on $T^* M$ and $M\times M$, if we consider polynomially controlled (at infinity) linearizations ($\O_M$-linearizations). In particular, we extend the usual (explicit) Moyal product (or $\la$-product, for the $\la$-quantization) on any exponential manifold with $\O_M$-bounded geometry on which we set a $\O_M$-linearization. We get the following $\la$-product formula, giving a Fr\'{e}chet algebra structure to $\S(T^*M)$, 
$$ 
a\circ_\la b\,(x,\eta) = \int_{T_x(M) \times M} d\mu_{x}(\xi)d\mu(y)\int_{V^\la_{x,\xi,y}} d\mu_{x,\xi,y}^*(\th,\th')\, g^\la_{x,\xi,y}\,e^{2\pi i \om^\la_{x,\xi,y}(\eta,\th,\th')} a(y^\la_{x,\xi},\th)\,b(y^{1-\la}_{x,-\xi},\th')
$$
where $a,b\in \S(T^*M)$ and the other notations are detailed in Proposition \ref{la-product}.

In section 4, we define the symbol and amplitudes spaces for our pseudodifferential calculus. Symbol spaces can be defined in an intrinsic way on the exponential manifold with the help of "symbol-like" control ($S_\sg$-bounded geometry, see Definition \ref{ssigmadef}) of the coordinate change diffeomorphisms $\psi_{z,z'}^{\bfr,\bfr'}$ associated to the exponential map $\exp$ on $M$.
For practical reasons the definition of amplitudes here is slightly different from the usual functions of the parameters $x,y$ and $\th$. Instead, our amplitudes generalize functions of the form $(\rx,\zeta,\vth)\mapsto a(\rx,\rx+\zeta,\vth)$, where $a$ is a standard amplitude of the Euclidian pseudodifferential calculus. We establish continuity and regularity results for operators of the following form (which can be seen, for some forms of $\Ga$, as special Fourier integral operators on $\R^n$):  
$$
\langle \Op_{\Ga}(a) , u\rangle :=  \int_{\R^{3n}} e^{2\pi i \langle \vth,\zeta \rangle } \Tr\big(a(\rx,\zeta,\vth)\, \Ga(u)^*(\rx,\zeta)) \, d\zeta\,d\vth\,  d\rx
$$
where $\Ga$ is a topological isomorphism on $\S(\R^{2n},L(E_z))$ (here $E_z$ is a fixed fiber of the Hermitian bundle $E\to M$, so $L(E_z)$ can be identified with $\M_{\dim E_z}(\C)$), $a$ is in a $\O_{f,z}$ space (see Definition \ref{OFZ}) and $u\in \S(\R^n,E_z)$. In particular, results of Proposition \ref{ampliOP} and \ref{amplContinu} and Lemma \ref{noyauReste} are believed to be new.

With the help of a hypothesis of a control of symbol type over the derivative of the linearization ($S_\sg$-linearizations), we obtain in section \ref{pdosection} an intrinsic definition (Theorem \ref{lambdainv}) of pseudodifferential operators $\Psi_{\sg}^{l,m}$ on $M$.
We see in section \ref{linkstd} a condition $(H_V)$ on the linearization that entails that any pseudodifferential operator on $M$, when transferred in a frame $(z,\bfr)$, is a standard pseudodifferential operator on $\R^n$. This condition yields a $L^2$-continuity result in Proposition \ref{L2cont}. 
The last part of section 4 is devoted to the derivation of a symbol product asymptotic formula for the composition of two pseudodifferential operators. The main result is Theorem \ref{compo}: under a special hypothesis $(C_\sg)$ on the linearization (see Definition \ref{Csigma}), we have the following asymptotic formula for the normal symbol (transferred in a frame $(z,\bfr)$) of the product of two pseudodifferential operators
$$
\sigma_{0}(AB)_{z,\bfr} \sim \sum_{\b,\ga \in \N^n} c_\b c_\ga \del_{\zeta,\vth}^{\ga,\ga}\big( a(\rx,\vth)\del^{\b}_{\zeta'}\big(e^{2\pi i \langle \vth,\varphi_{\rx,\zeta}(\zeta')\rangle} (\del^{\b}_{\vth'} f_b)(\rx,\zeta,\zeta',L_{\rx,\zeta}(\vth))\big)_{\zeta'=0} \tau^{-1}_{\rx,\zeta} \big)_{\zeta=0}
$$
where  $a:=\sigma_0(A)_{z,\bfr}$, $b:=\sigma_0(B)_{z,\bfr}$, and other notations are defined in section \ref{composec}.

Finally, we give in Section \ref{exsec} two possible settings (besides the usual standard calculus on the Euclidian $\R^n$) in which the previous calculus applies. The first is based on the Euclidian space $\R^n$, with a ``deformed'' (non-bilinear, non-flat) $S_\sg$-linearization. The second example is the hyperbolic plane (or Poincar\'{e} half-plane) $\HH$. We prove in particular that $\HH$ has a $S_1$-bounded geometry. This allows to define a global Fourier transform, Schwartz spaces $\S(\HH)$, $\S(T^*\HH)$, $\S(T\HH)$, $\B(\HH)$ and the space of symbols $S_1^{l,m}(T^*\HH)$. Moreover we can then define in an intrinsic way a global complete pseudodifferential calculus on $\HH$, and Moyal product, for any specified $S_\sg$-linearization on $\HH$.

\section{Manifolds with linearization and basic function spaces}
\subsection{Abstract exponential maps, definitions and notations}

The notion of linearization on a manifold was first introduced by Bokobza-Haggiag in \cite{Bokobza} and corresponds to a smooth map $\nu$ from $M\times M$ into $TM$ such that $\pi \circ \nu = \pi_1$, $\nu(x,x)=0$ for any $x\in M$ and $(d_y\nu)_{y=x}=\Id_{T_x M}$. In all the following, we shall work with ``global'' linearizations, in the following sense:

\begin{defn} A manifold with linearization (or exponential manifold) is a pair $(M,\exp)$ where $M$ is a smooth manifold and $\exp$ a smooth map from $TM$ into $M$ such that:

\noindent $(i)$ for any $x\in M$, $\exp_x:T_x M \to M$ defined as $\exp_x(\xi):=\exp(x,\xi)$, is a global diffeomorphism between $T_x M$ and $M$,

\noindent $(ii)$ for any $x\in M$, $\exp_x(0)=x$ and $(d\exp_x)_0=\Id_{T_xM}$.

\noindent The map $\exp$ will be called the exponential map, and $(x,y)\mapsto \exp_x^{-1}(y)$ the linearization, of the exponential manifold $(M,\exp)$. We shall sometimes use the shorthand $e_x^\xi:=\exp_x (\xi)$.
\end{defn}

Note that the term ``exponential manifold'' used here is not to be confused with the notion of ``exponential statistical manifold'' used in stochastic analysis. 
Remark that if $\exp\in C^{\infty}(TM,M)$ satisfies $(i)$, then defining $\Exp:=\exp\circ\ T$ where $T(x,\xi):=\exp_{x}^{-1}(x)+(d\exp^{-1}_x)_x\xi$, we see that $(M,\Exp)$ is an exponential manifold.

We will say that $(M,\nabla)$ (resp. $(M,g)$) is exponential, where $M$ is a smooth manifold with connection $\nabla$ on $TM$ (resp. with pseudo-Riemannian metric $g$), if $(M,\exp)$ where $\exp$ is the canonical exponential map associated to $\nabla$ (resp. to $g$) is an exponential manifold, or in other words, if for any $x\in M$, $\exp_x$ is a diffeomorphism from $T_x M$ onto $M$. Note that $(M,\nabla)$ (resp. $(M,g)$) is exponential if and only if
\begin{itemize}
\item $M$ is geodesically complete
\item For any $x,y\in M$, there exists one and only one maximal geodesic $\ga$ such that $\ga(0)=x$ and $\ga(1)=y$.
\item For any $x\in M$, $\exp_x$ is a local diffeomorphism.
\end{itemize}

\begin{rem} $\R^n$ (with its standard metric of signature $(p,n-p)$) is an exponential manifold and any $n$-dimensional real exponential manifold is diffeomorphic to $\R^n$. In particular, an exponential manifold cannot be compact. 
A Cartan--Hadamard manifold is a Riemannian, complete, simply connected manifold with nonpositive sectional curvature. It is a consequence of the Cartan--Hadamard theorem (see for instance \cite[Theorem 3.8]{Lang}) that any Cartan--Hadamard manifold is exponential.
\end{rem}

\begin{rem} The exponential structure can be transported by diffeomorphism: if $(M,\exp_M)$ is an exponential manifold, $N$ a smooth manifold and $\varphi: M\to N$  is a diffeomorphism, then $(N,\exp_N:=\varphi\circ \exp_M \circ \  T\varphi^{-1})$ is an exponential manifold.
\end{rem}
\begin{assum} We suppose from now on that $(M,\exp)$ is an exponential $n$-dimensional real manifold.
\end{assum}

For any $x,y\in M$, we define $\ga_{xy}$ as the curve $\R\to M$, $t\mapsto \exp_{x}(t\exp_x^{-1}y)$, and $\wt\ga_{xy}(t):=\ga_{yx}(1-t)$. Note that $\ga_{xy}(0)=x$ and $\ga_{xy}(1)=y$. If the exponential map is derived from a linear connection, we have for any $t\in \R$, $\ga_{xy}(t)=\wt \ga_{xy}(t)$. In the general case, this is only true for $t=0$ and $t=1$. 

The abstract exponential map $\exp$ provides the manifold $M$ with a notion of ``points at infinity'' and ``straight lines'' ($\ga_{xy}$). It can be seen as a generalization to manifolds of the useful properties of $\R^n$ for the study of the behaviour of functions at infinity. The abstract exponential map $\exp$ formalizes the fact that our straight lines never stop and connect any two different points.

The diffeomorphism $\exp_z^{-1}$, for a given $z\in M$, is not stricto sensu a chart, since it maps $M$ onto $T_z M$, which is diffeormorphic but not equal to $\R^n$. In order to obtain a chart, one needs to choose a linear basis of $T_z M$. If $z\in M$ and $\bfr$ is a basis of $T_z M$ we will call the pair $(z,\bfr)$ a (normal) frame. For any frame $(z,\bfr)$, we define $n_z^{\mathfrak{b}}:=L_\bfr\circ \exp_z^{-1}$ with $L_\bfr$ the linear isomorphism from $T_z M$ onto $\R^n$ associated to $\mathfrak{b}$. As a consequence, the pair $(M,n_z^\bfr)$ is a chart which is a global diffeomorphism from $M$ onto $\R^n$.

We denote $\psi_{z,z'}^{\bfr,\bfr'}:= n_z^\bfr \circ (n_{z'}^{\bfr'})^{-1}$ the normal coordinate change diffeomorphism from $\R^n$ onto $\R^n$ and
$(\del_{i,z,\bfr})_{i\in \N_n}$ and $(dx^{i,z,\bfr})_{i\in \N_n}$ (whith $\N_n:=\set{1,\cdots,n}$) the global frame vector fields and 1-forms associated to the chart $n_z^\bfr$. We also note $n^\bfr_{z,*}$ the diffeomorphism from $T^*M$ onto $\R^{2n}$ defined by $n^\bfr_{z,*}(x,\th)=(n^\bfr_{z}(x),\wt M_{z,x}^\bfr(\th))$ where $(\wt M_{z,x}^\bfr(\th)_i)_{i\in \N_n}$ are the components of $\th$ in $(dx^{i,z,\bfr}_{x})_{i\in \N_n}$ and $n_{z,T}^\bfr : (x,\xi)\to (n_z^\bfr(x), M^\bfr_{z,x}(\xi))$ the diffeomorphism from $TM$ onto $\R^{2n}$, where $( M ^\bfr_{z,x}(\xi)_i)_{i\in \N_n}$ are the coordinates of $\xi$ in the basis $({\del_{i,z,\bfr}}_x)_{i\in \N_n}$. We have $M^{\bfr}_{z,x}  = (dn_{z}^\bfr)_x$ and $\wt M^{\bfr}_{z,x}  =\, ^t(dn_{z}^\bfr)_x^{-1}$.
The diffeomorphism from $M\times M$ onto $\R^{2n}$ defined by $(x,y)\mapsto (n_z^\bfr(x),n_z^\bfr(y))$ will be noted $n_{z,M^2}^\bfr$.

We denote $(\del_{i,z,\bfr})_{i\in \N_{2n}}$ the family of vector fields on $T^*M$ (resp. $TM$, $M\times M$) associated to the chart $n^\bfr_{z,*}$ (resp. $n^\bfr_{z,T}$, $n^\bfr_{z,M^2}$) onto $\R^{2n}$. 
We suppose in all the following that $\mathfrak{E}$ is an arbitrary normed finite dimensional complex vector space. If $\nu$ is a ($2n$)-multi-index, we define the following operator on $C^{\infty}(T^*M,\mathfrak{E})$ (resp. $C^{\infty}(TM,\mathfrak{E})$, $C^{\infty}(M\times M,\mathfrak{E})$): 
$$
\del_{z,\bfr}^\nu:= \prod_{k=1}^{2n}\,\del_{k,z,\bfr}^{\nu_k}.
$$
If $\a$ and $\beta$ are $n$-multi-indices, we denote $(\a,\beta)$ the $2n$-multi-index obtained by concatenation. If $\a$ is a $n$-multi-index, $\del_{z,\bfr}^\a$ is a linear operator on $C^\infty(M,\mathfrak{E})$. 
We fix the shorcut $\langle \rx \rangle:=(1+\norm{\rx}^2)^{1/2}$ for any $\rx\in \R^p$, $p\in \N$. 
We will use the convention $\rx^\a := \rx_1^{\a_1} \cdots \rx_p^{\a_p}$ for $\rx\in \R^p$ and $\a$ $p$-multi-index, with $0^0:=1$. 
If $f$ is continuous function from $\R^p$ to a normed vector space and $g$ is a continuous function from $\R^p$ to $\R$, we denote $f=\O(g)$ if and only if there exist $r>0$, $C>0$ such that for any $\rx\in \R^{p}\backslash B(0,r)$, $\norm{f(\rx)}\leq C |g(\rx)|$. In the case where $g$ is strictly positive on $\R^p$, this is equivalent to: there exists $C>0$ such that for any  $\rx\in \R^{p}$, $\norm{f(\rx)}\leq C g(\rx)$.  
We also introduce the following shorthands, for given $(z,\bfr)$, $x,y\in M$, $\th\in T^*_x (M)$, $\xi\in T_x(M)$:
\begin{align*}
&\langle x\rangle_{z,\bfr} := \langle{n_{z}^\bfr(x)}\rangle,\qquad
\langle \th \rangle_{z,\bfr,x} := \langle{\wt M_{z,x}^\bfr(\th)}\rangle,\qquad
\langle \xi \rangle_{z,\bfr,x} := \langle{ M_{z,x}^\bfr(\xi)}\rangle, \\
&\langle x,y\rangle_{z,\bfr}:=\langle (n_z^\bfr(x),n_z^\bfr(y))\rangle,\quad  \langle x,\th\rangle_{z,\bfr}:=\langle (n_z^\bfr(x),\wt M_{z,x}^{\bfr}(\th))\rangle, \quad \langle x,\xi\rangle_{z,\bfr}:=\langle (n_z^\bfr(x), M_{z,x}^{\bfr}(\xi))\rangle\, .
\end{align*}
If $f$ and $g$ are in $C^0(\R^p, \R^{p'})$ we denote $f\asymp g$ the equivalence relation defined by: $\langle f\rangle =\O(\langle g\rangle)$ and $\langle g \rangle =\O(\langle f \rangle )$.
 
\subsection{Parallel transport on an Hermitian bundle}

Let $E$ be an hermitian vector bundle (with typical fiber $\mathbb{E}$ as a finite dimensional complex vector space) on the exponential manifold $(M,\exp)$. $E$ admits a (non-unique) connection $\nabla^E$ compatible with the hermitian metric \cite{Berline}. It is a differential operator from $C^\infty(M,E)$ (the space of smooth sections of $E\to M$) to $C^\infty(M,T^*M\otimes E)$
such that for any smooth function $f$ on $M$ and smooth $E$-sections $\psi$, $\psi'$, 
\begin{align*}
&\nabla^E(f\psi)= df\ox\psi+ f\nabla^E \psi \, ,\\
&d(\psi|\psi')= (\nabla^E \psi|\psi')+(\psi|\nabla^E\psi')\, ,
\end{align*}
where $(\psi|\psi')$ is the hermitian pairing of $\psi$ and $\psi'$. We will note $|\psi|^2:=(\psi|\psi)$. The sesquilinear form $(\cdot|\cdot)_x$ of $E_x$ is antilinear in the second variable by convention. The operator $\nabla^E$ can be (uniquely) extended as an operator acting on $E$-valued differential forms on $M$. 
If $\gamma$ is a curve on $M$ defined on an interval $J$ and $\ga^*E$ the associated pullback bundle on $J$, there exists a natural connection (the pullback of $\nabla^E$) on $\ga^*E$, noted $\nabla^{\ga^*E}$ compatible with $\nabla^E$.

Let us fix $x,y\in M$ and $\ga :\, J \to M$ a curve such that $\ga(0)=x$ and $\ga(1)=y$. For any $v\in E_x$, there exists an unique smooth section $\beta$ of $\ga^*E\to J$ such that $\beta(0)=v$ and $\nabla^{\ga^*E} \beta=0$. Clearly, $\beta(1) \in E_{y}$ and we can define a linear isomorphism $\tau_{\ga}$ from $E_x$ to $E_y$ as $\tau_{\ga}(v)=\beta(1)$.
The map $\tau_{\ga}$ is the parallel transport map associated to $\ga$ from $E_x$ to $E_y$. 
The compatibility of $\nabla^E$ with the hermitian metric entails that the maps $\tau_{\ga}$ are in fact isometries for the hermitian structures on $E_x$ and $E_y$. 

The vector bundle $L(E)\to M$, defined by $L(E)_x:=L(E_x)$ (the space of endomorphisms on $E_x$), is lifted to $T^*M$, $TM$ and $M\times M$ by setting the fiber at $(x,\th)$ to $L(E_x)$ for $T^*M$ or $TM$, and the fiber at $(x,y)$ to $L(E_y,E_x)$ for $M\times M$. The canonical projection from $T^*M$ or $TM$ to $M$ is noted $\pi$.

We denote $\tau_{xy}:=\tau_{\ga_{xy}}$. Remark that $\tau_{xy}^{-1}=\tau_{\wt\ga_{yx}}$.  
We define $\tau_z : x \mapsto \tau_{zx}$ and $\tau_z^{-1} : x\mapsto \tau_{zx}^{-1}=\tau_{zx}^*$. 

If $u\in C^\infty(M,E)$ and $z\in M$, we denote $u^z (x):= (\tau_{z}^{-1}\, u)(x)$ for any $x\in M$.
If $a$ is a section of $L(E)\to T^*M$ or $L(E)\to TM$, we denote $a^z := (\tau_z^{-1} \circ \pi) \, a\,(\tau_{z}\circ \pi)$. If $a$ is a section of $L(E)\to M\times M$, we denote $a^z(x,y):= \tau_z^{-1}(x)\,a(x,y)\, \tau_{z}(y)$.
We also define $\tau^z:= (x,y)\mapsto \tau_z^{-1}(y)\tau(x,y) \tau_z(x) \in L(E_z)$. Noting $\pi_1(x,y):=x$, $\pi_2(x,y):=y$, we get $a^z= (\tau_z^{-1}\circ \pi_1)\,a\, (\tau_z\circ \pi_2) $ and $\tau^z=(\tau_z^{-1}\circ \pi_2)\, a\, (\tau_z\circ \pi_1)$.

Parallel transport on $E$ has the following smoothness property: 

\begin{lem} 
(i) 
The map $\tau : (x,y) \mapsto \tau_{xy}$ (resp. $\tau^{-1} : (x,y) \mapsto \tau_{xy}^{-1}$) is a smooth section of the vector bundle $L(E)^\vee\to M\times M$ where the fiber at $(x,y)$ is $L(E_x,E_y)$ (resp. of the vector bundle $L(E)\to M\times M$).

\noindent (ii) $\tau_z \in C^\infty(M,L(E_z,E))$ and $\tau_z^{-1} \in C^{\infty}(M,L(E,E_z))$.

\noindent (iii) $\tau^z \in C^\infty(M\times M,L(E_z))$.
\end{lem}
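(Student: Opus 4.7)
The plan is to reduce everything to smooth dependence of solutions of a linear ODE on its parameters and initial data. The essential point is that the family of curves $(\gamma_{xy})_{(x,y)\in M\times M}$ along which the parallel transport is performed depends smoothly on $(x,y)$, and parallel transport along such a smooth family is governed by an ODE whose coefficients inherit this smoothness.

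First I would address (i). The map $H\colon M\times M\times \R \to M$ defined by $H(x,y,t):=\exp_x(t\exp_x^{-1}(y))$ is smooth: $\exp$ is smooth by hypothesis, and since each $\exp_x$ is a diffeomorphism depending smoothly on $x$, the map $(x,y)\mapsto \exp_x^{-1}(y)$ is smooth as well (this is where the abstract exponential structure intervenes crucially). Fix $(x_0,y_0)\in M\times M$. Choose local trivializations of $E$ over open sets $U_x\ni x_0$ and $U_y\ni y_0$, and a compact interval $[0,1]$. By compactness of $\gamma_{x_0y_0}([0,1])$ and continuity of $H$, I can find a neighborhood $W\subset U_x\times U_y$ of $(x_0,y_0)$ and a finite open cover of $[0,1]$ by subintervals $J_1,\dots,J_N$ together with trivializing neighborhoods $V_1,\dots,V_N$ of $E$ such that $H(W\times J_k)\subset V_k$. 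In each $V_k$, the connection reads $\nabla^E=d+\omega_k$ where $\omega_k$ is a smooth $\mathrm{End}(\mathbb E)$-valued one-form. Along $\gamma_{xy}$, parallel transport of a section $\beta$ expressed in the trivialization by a vector-valued function $B(t;x,y)$ then satisfies the linear ODE
\[
\dot B(t;x,y) + A_k(t;x,y)\,B(t;x,y)=0,\qquad A_k(t;x,y):=\omega_k\bigl(\partial_t H(x,y,t)\bigr),
\]
on each subinterval $J_k$, with matching at the endpoints provided by the (smooth) transition functions between trivializations. Since $H$ is smooth, each $A_k$ depends smoothly on $(t,x,y)\in J_k\times W$, and the classical theorem of smooth dependence on parameters for linear ODEs (see e.g.\ \cite[Theorem 2.3.2]{Hormander}, or any standard reference) yields that the fundamental solution $\Phi_k(t;x,y)$ is smooth in $(t,x,y)$. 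Composing the $\Phi_k$ across the subintervals and reabsorbing the smooth transition matrices yields a smooth map $(x,y)\in W\mapsto \tau_{xy}\in L(E_x,E_y)$ once expressed in the local trivializations. Smoothness being a local property, this establishes smoothness of $\tau$ as a section of $L(E)^\vee\to M\times M$. The statement for $\tau^{-1}$ follows either by the same argument with the reversed curves $\wt\gamma_{yx}$, or from the fact that $\tau_{xy}^{-1}=\tau_{xy}^*$ (the connection being compatible with the hermitian metric) combined with smoothness of the adjoint.

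Then (ii) is immediate: $\tau_z=\tau\circ \iota_z$ where $\iota_z\colon M\to M\times M$, $x\mapsto (z,x)$, is smooth, and smoothness is preserved under composition. Similarly for $\tau_z^{-1}$.

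Finally, (iii) follows by combining (i) and (ii): $\tau^z(x,y)=\tau_z^{-1}(y)\,\tau_{xy}\,\tau_z(x)$ is a pointwise composition of three smooth maps valued in the appropriate spaces of linear operators, hence smooth from $M\times M$ into $L(E_z)$.

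The only nontrivial step is the ODE argument in (i); everything else is bookkeeping. I expect no real obstacle, since the hypothesis that each $\exp_x$ is a global diffeomorphism makes the family $\gamma_{xy}$ globally smoothly parametrized by $(x,y)$, eliminating the usual need to restrict to a normal neighborhood.
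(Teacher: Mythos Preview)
Your proposal is correct and follows essentially the same approach as the paper: both establish that the family of curves $\gamma_{xy}$ is smoothly parametrized by $(x,y)$, then invoke smooth dependence of parallel transport on parameters. The paper is slightly more careful on one point you pass over quickly: it proves explicitly that $G:TM\to M\times M$, $G(x,\xi)=(x,\exp_x\xi)$, is a global diffeomorphism (by checking the Jacobian and bijectivity), which is what actually justifies joint smoothness of $(x,y)\mapsto\exp_x^{-1}(y)$; conversely, where you spell out the ODE argument in local trivializations, the paper simply cites \cite{Dumitrescu} for the smooth bundle homomorphism.
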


\begin{proof} $(i)$ The map $G: TM \to M\times M$ defined by $G(v):= (\pi(v),\exp(v))$ is a local diffeomorphism since the Jacobian of $G$ at $v_0=(x_0,\xi_0)\in TM$ is equal to the Jacobian of $\exp_{x_0}$ at $\xi_0$. Since it is also bijective (with inverse $G^{-1}(x,y):= (x,\exp_x^{-1}(y))$), it is a (global) diffeomorphism $TM\to M\times M$.
The map $b(x,y,t):= (x,t\exp_x^{-1}(y))$ is thus a smooth map from $M\times M\times \R$ to $TM$, and we get a smooth parametrization by $M\times M$ of the following family of curves: $c (x,y)\mapsto \big(\ga_{xy}: t\mapsto \exp b(x,y,t))$. This parametrization leads (see \cite[p. 17]{Dumitrescu}) to a smooth bundle homomorphism between $c^*(\cdot)(0)E\to M\times M$ and $c^*(\cdot)(1) E\to M\times M$, so a smooth section $\tau:(x,y)\mapsto \tau_{xy}$ of $L(E_x,E_y)\to M\times M$. The case of $\tau^{-1}$ is similar, by taking $b^{-1}(x,y,t):=b(x,y,1-t)$.

\noindent $(ii,iii)$ are straightforward consequences of $(i)$.
\end{proof}

\begin{cly} If $u$ is in the space $C^\infty(M,E)$, then $u^z \in C^\infty(M,E_z)$. Similarly, if $a\in C^\infty({T^*M,L(E)})$ (resp. $C^\infty(TM,L(E))$, $C^\infty(M\times M, L(E))$), then $a^z \in C^\infty({T^*M, L(E_z)})$ (resp. $C^\infty({TM, L(E_z)})$, $C^\infty(M\times M ,L(E_z))$). 
\end{cly}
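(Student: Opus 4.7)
The plan is to derive this Corollary directly from the preceding Lemma by observing that each of the transformed objects $u^z$, $a^z$, $\tau^z$ is constructed pointwise from smooth data through fiberwise composition of linear maps. Since the Lemma already establishes that $\tau_z$, $\tau_z^{-1}$, and $\tau$ are smooth, the only remaining content is the stability of smoothness under these compositions.

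First, I would treat the section case. Given $u\in C^\infty(M,E)$, the assignment $x\mapsto u^z(x)=\tau_{zx}^{-1}u(x)$ factors as the pointwise application, at each $x\in M$, of the smooth section $\tau_z^{-1}\in C^\infty(M,L(E,E_z))$ to the smooth section $u\in C^\infty(M,E)$. In local trivializations of $E$ around $x_0$ and of $L(E,E_z)$ (obtained by fixing a basis of the fixed fiber $E_z$), this operation reduces to matrix–vector multiplication of smooth functions with values in finite-dimensional spaces, hence yields a smooth $E_z$-valued function. This gives $u^z\in C^\infty(M,E_z)$.

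Next, I would handle the three bundle cases in parallel. For $a\in C^\infty(T^*M,L(E))$, write $a^z=(\tau_z^{-1}\circ\pi)\,a\,(\tau_z\circ\pi)$. Since $\pi:T^*M\to M$ is smooth and $\tau_z,\tau_z^{-1}$ are smooth sections on $M$ with values in the appropriate $L(\cdot,\cdot)$-bundles, the compositions $\tau_z\circ\pi$ and $\tau_z^{-1}\circ\pi$ are smooth sections of the pulled-back bundles over $T^*M$. Pointwise composition at $(x,\th)\in T^*M$ yields $\tau_{zx}^{-1}\circ a(x,\th)\circ\tau_{zx}\in L(E_z)$, and in local trivializations this is again a matrix product of smooth maps, so $a^z\in C^\infty(T^*M,L(E_z))$. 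The argument for $TM$ is verbatim. For the $M\times M$ case, I would use the two projections $\pi_1,\pi_2:M\times M\to M$ instead of $\pi$: the identity $a^z=(\tau_z^{-1}\circ\pi_1)\,a\,(\tau_z\circ\pi_2)$ reduces smoothness to the smoothness of $\tau_z$, $\tau_z^{-1}$, $a$, $\pi_1$, $\pi_2$, and their pointwise composition.

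There is no real obstacle here: the lemma already does all the nontrivial work (transferring the smoothness of $\tau$ on $M\times M$ to the associated $\tau_z$, $\tau_z^{-1}$, $\tau^z$), and the corollary only packages this as smoothness of the transformed sections. Perhaps the only point worth being explicit about is that fiberwise linear composition is a smooth operation on the total spaces of the hom-bundles, which is immediate in local trivializations since it is polynomial (in fact bilinear) in the matrix entries; I would state this once and invoke it in each of the four cases.
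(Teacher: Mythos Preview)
Your proposal is correct and matches the paper's approach: the paper states this as an immediate corollary of the preceding Lemma (with no written proof), and your argument spells out precisely the routine verification the paper leaves implicit, namely that pointwise composition of the smooth sections $\tau_z$, $\tau_z^{-1}$ with $u$ or $a$ preserves smoothness.
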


\begin{rem} The vector bundle $E$ on $M$ is trivializable and the parallel transport provides a $M$-indexed family of trivializations, since for any $z\in M$, the pair $f_z :E\mapsto M \times \mathbb{E} , (x,v)\mapsto (x,\tau_{xz}(v))$, $\Id : M\mapsto M, x\mapsto x$, is a vector bundle isomorphism 
from $E\to M$ onto $M \times \mathbb{E}\to M$. Note that if $\exp$ is derived from a connection, $\tau_{xy}^{-1}=\tau_{yx}$ for any $x,y\in M$.
\end{rem}

\subsection{$\O_M$ and $S_\sigma$-bounded geometry}

Classically, in Riemannian geometry, bounded geometry hypothesis gives boundedness on the covariant derivative of the Riemann curvature of the basis manifold.
For the following pseudodifferential calculus, we shall need some hypothesis of that kind, formulated not with the curvature but with the exponential diffeomorphisms (``normal'' coordinate transition maps). 
The hypothesis that we will need for pseudodifferential symbol calculus is actually not simply the boundedness condition on the derivatives of the transition maps, which is a classical consequence of bounded geometry. For symbol calculus, we will require that the $n^{\text{th}}$-derivatives are not only bounded, but decrease to zero at infinity as $\norm{x}^{-\sigma (n-1)}$ where $\sigma$ is a parameter in $[0,1]$. Or, in other words, the normal coordinate change maps behave as ``symbols'' or order 1. Thus, we introduce the following

\begin{defn}
\label{ssigmadef}
Let $\sigma \in [0,1]$. The exponential manifold $(M,\exp)$ is said to have a $S_\sigma$-bounded geometry if for any $(z,\bfr)$, $(z',\bfr')$, and any $n$-multi-index $\a\neq 0$, 
\begin{align*}
&(S_\sigma 1)\qquad \del^{\a} \psi_{z,z'}^{\bfr,\bfr'}(\rx) = \mathcal{O}(\langle \rx \rangle^{-\sigma(|\a|-1)})\, ,
\end{align*}
and a $\O_M$-bounded geometry if for any $(z,\bfr)$, $(z',\bfr')$, and any $n$-multi-index $\a$, there exist $p_\a\geq 1$ such that
\begin{align*}
&(\O_M 1)\qquad \del^{\a} \psi_{z,z'}^{\bfr,\bfr'}(\rx) = \mathcal{O}(\langle \rx \rangle^{p_\a})  \, .
\end{align*}
\end{defn}

We shall be working with $\O_M$-bounded geometry for the definition of function spaces and Fourier transform and with $S_{\sigma}$-bounded geometry (for a $\sigma \in [0,1]$) for pseudodifferential symbol calculus. 

\begin{defn} The triple $(M,\exp,E)$ where $(M,\exp)$ is exponential and $E$ is a hermitian vector bundle on $M$ has a $S_\sigma$-bounded geometry if $(M,\exp)$ has a $S_\sigma$-bounded geometry and for any $(z,\bfr)$, $z',z''$, and any $n$-multi-index $\a$,  
$$
(S_\sigma 2)\qquad \del_{z,\bfr}^\a \tau_{z'}^{-1}\tau_{z''}(x) = \mathcal{O}( \langle x \rangle_{z,\bfr}^{-\sigma |\a|}) \, ,
$$
and a $\O_M$-bounded geometry if $(M,\exp)$ has a $\O_M$-bounded geometry and for any $(z,\bfr)$, $(z',\bfr')$, and any $n$-multi-index $\a$, there exist $p_\a\geq 1$ such that
\begin{align*}
&(\O_M 2)\qquad \del_{z,\bfr}^\a \tau_{z'}^{-1}\tau_{z''}(x) = \mathcal{O}(\langle x \rangle_{z,\bfr}^{p_\a})  \, .
\end{align*}
\end{defn}

Clearly, if $\sigma\leq \sigma'$, since $(S_{\sigma'} i)\Rightarrow (S_\sigma i)$, we have $S_{\sigma'} $-bounded $\Rightarrow$ $S_\sigma$-bounded $\Rightarrow$ $\O_M$-bounded. 
Note that $S_\sigma$-bounded geometry on the vector bundle entails that the derivatives of the transport transition maps $\tau_{z}^{-1}\tau_{z'}$ (smooth from $M$ to $L(E_{z'},E_z$)) are bounded (for $S_0$-bounded geometry) or decrease to zero with an order equal to the order of the derivative (for $S_1$-bounded geometry). Remark also that if $E$ is a trivial bundle and $\nabla^E=d$, then $(S_1 2)$ is automatically satisfied since the maps $\tau_{z}$ are all equal to the constant $x\mapsto Id_{\mathbb{E}}$.

\begin{lem}
\label{B1-lem}
Let $\sigma\in [0,1]$ and $(z,\bfr)$, $(z',\bfr')$ be given frames.

\noindent (i) If $(M,\exp)$ has a $S_\sigma$-bounded geometry, there exist $K,C,C'>0$ such that for any $\rx \in \R^n$, $x\in M$, $\th\in T^*_x(M)$, $\xi \in T_x(M)$,
\begin{align}
 \psi_{z,z'}^{\bfr,\bfr'} \asymp \Id_{\R^n}  \qquad \text{ and } \qquad \langle x \rangle_{z,\bfr} \leq K \langle x \rangle_{z',\bfr'}\, ,\label{i-first} \\
 \langle \th \rangle_{z,\bfr,x} \leq C \langle \th \rangle_{z',\bfr',x} \qquad \text{ and } \qquad \langle \xi \rangle_{z,\bfr,x} \leq C' \langle \xi \rangle_{z',\bfr',x}\, ,\label{i-bis}
\end{align}
and if $(M,\exp)$ has a $\O_M$-bounded geometry, there exist $K,K',K'',C,C'>0$ and $q\geq 1$ such that for any $\rx \in \R^n$, $x\in M$, $\th\in T^*_x(M)$, $\xi \in T_x(M)$,
\begin{align}
K' \langle \rx \rangle^{1/q} \leq \langle \psi_{z,z'}^{\bfr,\bfr'}(\rx)\rangle \leq K''\langle \rx \rangle^q  \qquad \text{ and } \qquad \langle x \rangle_{z,\bfr} \leq K \langle x \rangle^q_{z',\bfr'}\, ,\label{i-first-om} \\
 \langle \th \rangle_{z,\bfr,x} \leq C \langle x \rangle^q_{z',\bfr'}\langle \th \rangle_{z',\bfr',x} \qquad \text{ and } \qquad \langle \xi \rangle_{z,\bfr,x} \leq C' \langle x \rangle^q_{z',\bfr'}\langle \xi \rangle_{z',\bfr',x}\, ,\label{i-bis-om}
\end{align}

\noindent (ii) For any given $n$-multi-indice $\a$, we can write 
$$
\del_{z,\bfr}^{\a} = \sum_{0\leq |\a'|\leq |\a|} f_{\a,\a'}\, \del_{z',\bfr'}^{\a'}
$$
where the $(f_{\a,\a'})$ are smooth real functions on $M$ such that for each $n$-multi-indices $\a,\a'$,   

(a) if $(M,\exp)$ has a $S_\sigma$-bounded geometry, there exists $C_{\a}>0$ such that for any $x\in M$,
$| f_{\a,\a'} (x)| \leq C_{\a} \langle x\rangle_{z,\bfr}^{-\sigma (|\a|-|\a'|)}$,

(b) if $(M,\exp)$ has a $\O_M$-bounded geometry, there exist $C_{\a}>0$ and $q_{\a}\geq 1$ such that for any $x\in M$, $| f_{\a,\a'} (x)| \leq C_{\a} \langle x\rangle_{z,\bfr}^{q_{\a}}$.

\end{lem}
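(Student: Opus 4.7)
The proof naturally splits along the dichotomy $S_\sigma$ vs $\O_M$, and for each case along (i) vs (ii). For (i), the starting observation is that $(S_\sigma 1)$ with $|\a|=1$ says $\del^\a \psi_{z,z'}^{\bfr,\bfr'}$ is bounded (while $(\O_M 1)$ with $|\a|=1$ gives only polynomial growth). I would therefore apply the mean value theorem to $\psi_{z,z'}^{\bfr,\bfr'}$ on a straight segment from $0$ to $\rx$ to get $|\psi_{z,z'}^{\bfr,\bfr'}(\rx)|\leq C(1+\norm{\rx})$ in the $S_\sigma$ case, and $|\psi_{z,z'}^{\bfr,\bfr'}(\rx)|\leq C\langle\rx\rangle^{q'}$ for some $q'\geq1$ in the $\O_M$ case. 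Using the same argument on the inverse $\psi_{z',z}^{\bfr',\bfr} = (\psi_{z,z'}^{\bfr,\bfr'})^{-1}$ yields the reverse inequalities, giving \eqref{i-first} and \eqref{i-first-om}; then $\langle x\rangle_{z,\bfr}=\langle\psi_{z,z'}^{\bfr,\bfr'}(n_{z'}^{\bfr'}(x))\rangle$ transfers these to $\langle x\rangle_{z,\bfr}$ vs. $\langle x\rangle_{z',\bfr'}$. The estimates \eqref{i-bis} and \eqref{i-bis-om} on covectors and vectors then follow from the chain rule $(dn_z^\bfr)_x = (d\psi_{z,z'}^{\bfr,\bfr'})_{n_{z'}^{\bfr'}(x)}\circ (dn_{z'}^{\bfr'})_x$, which expresses $M_{z,x}^\bfr,\wt M_{z,x}^\bfr$ in terms of $M_{z',x}^{\bfr'},\wt M_{z',x}^{\bfr'}$ through the Jacobian of $\psi_{z,z'}^{\bfr,\bfr'}$ at $n_{z'}^{\bfr'}(x)$; by $(S_\sigma 1)$ (resp.\ $(\O_M 1)$) with $|\a|=1$ this Jacobian is uniformly bounded (resp.\ polynomially bounded in $\langle x\rangle_{z',\bfr'}$), and the operator norm of the inverse Jacobian is handled symmetrically by swapping the roles of $(z,\bfr)$ and $(z',\bfr')$.

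For (ii), the base case $|\a|=1$ comes from the standard change-of-chart formula: for $u\in C^\infty(M,\mathfrak{E})$, writing $u\circ (n_z^\bfr)^{-1} = (u\circ (n_{z'}^{\bfr'})^{-1})\circ \psi_{z',z}^{\bfr',\bfr}$ and applying $\del_i$ gives $\del_{i,z,\bfr} = \sum_j g_{ij}\,\del_{j,z',\bfr'}$ with $g_{ij}(x) = (\del_i\psi_{z',z}^{\bfr',\bfr})_j(n_z^\bfr(x))$; these coefficients are $\O(1)$ under $(S_\sigma 1)$ (resp.\ $\O(\langle x\rangle_{z,\bfr}^{p_{e_i}})$ under $(\O_M 1)$), matching the claim since $|\a|-|\a'|=0$. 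Rather than proceed by brute induction, I would invoke the multivariate Faà di Bruno formula directly on $u\circ (n_z^\bfr)^{-1} = (u\circ (n_{z'}^{\bfr'})^{-1})\circ \psi_{z',z}^{\bfr',\bfr}$, yielding
$$
\del_{z,\bfr}^\a = \sum_{1\leq|\a'|\leq|\a|} \Big(\sum_{(\beta_1,\dots,\beta_{|\a'|})} c_{\a,\a'}^{(\beta)} \prod_{j=1}^{|\a'|} (\del^{\beta_j}\psi_{z',z}^{\bfr',\bfr})_{k_j}\Big)\circ n_z^\bfr \ \del_{z',\bfr'}^{\a'},
$$
where the inner sum is over decompositions with $\sum_j|\beta_j|=|\a|$ and $|\beta_j|\geq 1$. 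This identifies $f_{\a,\a'}$ explicitly and reduces the problem to estimating such products.

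The only real work is then the coefficient bookkeeping, which is where I expect the main (but still routine) effort. In the $S_\sigma$ case each factor $\del^{\beta_j}\psi_{z',z}^{\bfr',\bfr}$ is $\O(\langle\rx\rangle^{-\sigma(|\beta_j|-1)})$, so the product is $\O(\langle\rx\rangle^{-\sigma\sum_j(|\beta_j|-1)}) = \O(\langle\rx\rangle^{-\sigma(|\a|-|\a'|)})$; composition with $n_z^\bfr$ and the equivalence $\langle\rx\rangle\asymp\langle x\rangle_{z,\bfr}$ from (i) converts this to the required bound $\O(\langle x\rangle_{z,\bfr}^{-\sigma(|\a|-|\a'|)})$. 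In the $\O_M$ case the same product is a finite sum of terms $\O(\langle\rx\rangle^{\sum_j p_{\beta_j}})$, which composed with $n_z^\bfr$ gives $\O(\langle x\rangle_{z,\bfr}^{q})$ for some exponent $q=q_\a$ depending only on $\a$; enlarging $q_\a$ if necessary ensures $q_\a\geq 1$. The obstacle to watch is making sure that when $|\a'|=0$ (if this term appears at all from Faà di Bruno; in fact the convention there forces $|\a'|\geq 1$), the statement still holds—here I would allow $\a'=0$ in the sum with $f_{\a,0}=0$, so the indexing $0\leq|\a'|\leq|\a|$ in the lemma is consistent with the Faà di Bruno expansion starting at $|\a'|=1$.
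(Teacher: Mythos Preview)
Your proposal is correct and follows essentially the same route as the paper: for (i), the mean value/Taylor inequality on $\psi_{z,z'}^{\bfr,\bfr'}$ and its inverse together with the Jacobian identities $\wt M_{z,x}^\bfr (\wt M_{z',x}^{\bfr'})^{-1}={}^t(d\psi_{z',z}^{\bfr',\bfr})_{n_z^\bfr(x)}$ and $M_{z,x}^\bfr (M_{z',x}^{\bfr'})^{-1}=(d\psi_{z,z'}^{\bfr,\bfr'})_{n_{z'}^{\bfr'}(x)}$; for (ii), the multivariate Faà di Bruno formula applied to $(u\circ(n_{z'}^{\bfr'})^{-1})\circ\psi_{z',z}^{\bfr',\bfr}$, with the same exponent count $\sum_j(|\beta_j|-1)=|\a|-|\a'|$. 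Your handling of the $\a'=0$ term (setting $f_{\a,0}=0$ for $\a\neq 0$, $f_{0,0}=1$) is exactly the bookkeeping the paper leaves implicit.
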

\begin{proof} $(i)$   Suppose that $(M,\exp)$ has a $S_\sigma$-bounded geometry. Taylor formula implies that 
$\norm{\psi_{z,z'}^{\bfr,\bfr'}(\rx)} \leq \norm{\psi_{z,z'}^{\bfr,\bfr'}(0)}  + C_0 \norm{\rx}$ for any $\rx\in \R^n$, where $C_0:= \sup_{\rx\in \R^n} \norm{(d\psi_{z,z'}^{\bfr,\bfr'})_\rx}$. As a consequence $\psi_{z,z'}^{\bfr,\bfr'}(\rx)=\O(\norm{\rx})$ and thus, there is $K''>0$ such that $\langle \psi_{z,z'}^{\bfr,\bfr'}(\rx) \rangle \leq K'' \langle \rx \rangle$. The same argument for $\psi_{z',z}^{\bfr',\bfr}=(\psi_{z,z'}^{\bfr,\bfr'})^{-1}$ gives $\psi_{z,z'}^{\bfr,\bfr'} \asymp \Id_{\R^n}$ and $\langle x \rangle_{z,\bfr} \leq K \langle x \rangle_{z',\bfr'}$ follows immediately. 
Since $x\mapsto \norm{\wt M_{z,x}^\bfr ( \wt M_{z',x}^{\bfr'})^{-1}}=\norm{(d \psi_{z',z}^{\bfr',\bfr})_{n_z^\bfr(x)}}$ and $x\mapsto \norm{ M_{z,x}^\bfr (  M_{z',x}^{\bfr'})^{-1}}=\norm{(d \psi_{z,z'}^{\bfr,\bfr'})_{n_{z'}^{\bfr'}(x)}}$ are bounded functions, (\ref{i-bis}) follows. The case where $(M,\exp)$ has a $\O_M$-bounded geometry is similar.

\noindent $(ii)$ We have for any $f\in C^\infty(M,\mathfrak{E})$, 
$$
\del_{z,\bfr}^\a (f) = \del^\a(f\circ (n_{z}^\bfr)^{-1}) \circ n_z^\bfr = \del^\a(f\circ (n_{z'}^{\bfr'})^{-1} \circ \psi_{z',z}^{\bfr',\bfr}) \circ n_z^\bfr\, .
$$
We now apply the multivariate Faa di Bruno formula obtained by G.M. Constantine and T.H. Savits in \cite{Constantine}, that we reformulated for convenience in Theorem \ref{FaaCS}. This formula entails that for any $n$-multi-index $\a\neq 0$,
$$
\del^\a(f\circ (n_{z'}^{\bfr'})^{-1} \circ \psi_{z',z}^{\bfr',\bfr}) =  \sum_{1\leq |\a'|\leq |\a|} P_{\a,\a'}(\psi_{z',z}^{\bfr',\bfr})\,(\del^{\a'} f\circ (n_{z'}^{\bfr'})^{-1} )\circ \psi_{z',z}^{\bfr',\bfr}\  
$$
and thus 
$$
\del_{z,\bfr}^\a =  \sum_{1\leq |\a'|\leq |\a|} (P_{\a,\a'}(\psi_{z',z}^{\bfr',\bfr})\circ n_z^\bfr) \ \del_{z',\bfr'}^{\a'} =:\sum_{1\leq |\a'|\leq |\a|} f_{\a,\a'} \ \del_{z',\bfr'}^{\a'}
$$ 
where $P_{\a,\a'}(g)$ is a linear combination of terms of the form $\prod_{j=1}^{s} (\del^{l^j} g)^{k^j} $, where $1\leq s\leq |\a|$ and  
the $k^j$ and $l^j$ are $n$-multi-indices with $|k^j|>0$, $|l^j|>0$, $\sum_{j=1}^s |k^j| = |\a'|$ and $\sum_{j=1}^s|k^j||l^j|=|\a|$. In the case where $(M,\exp)$ has a $S_\sigma$-bounded geometry, for each $s,(k^j),(l^j)$, there is $K>0$ such that for any $\rx\in \R^n$,
$$
|\prod_{j=1}^{s} (\del^{l^j} \psi_{z',z}^{\bfr',\bfr})^{k^j}(\rx)|\leq K \langle \rx \rangle^{-\sigma \sum_{j=1}^s(|l^j|-1)|k^j|} = K \langle \rx \rangle^{-\sigma (|\a|-|\a'|)}  
$$
which gives the result. The case where $(M,\exp)$ has a $\O_M$-bounded geometry is similar.
\end{proof}

\begin{thm}\cite{Constantine}
\label{FaaCS} 
Let $f\in C^\infty(\R^p,\mathfrak{E})$ and $g\in C^\infty(\R^n,\R^p)$. Then for any $n$-multi-index $\nu\neq 0$,
$$
\del^\nu (f\circ g) = \sum_{1\leq |\la|\leq |\nu|} (\del^\la f)\circ g\  \sum_{s=1}^{|\nu|} \sum_{p_s(\nu,\la)}\nu! \prod_{j=1}^s \tfrac{1}{k^j!(l^j!)^{|k^j|}} (\del^{l^j} g)^{k^j}
$$
where $p_s(\nu,\la)$ is the set of $p$-multi-indices $k^j$ and $n$-multi-indices $l^j$ ($1\leq j\leq s$) such that $0\prec l^1 \prec \cdots\prec l^s$ ($l\prec l'$ being defined as ``$|l|<|l'|$ or $|l|=|l'|$ and $l<_Ll'$'' where $<_L$ is the strict lexicographical order), $|k^j|>0$, $\sum_{j=1}^s k^j = \la$ and $\sum_{j=1}^s |k^j| l^j= \nu$.
\end{thm}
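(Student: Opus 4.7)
The plan is to prove the multivariate Faà di Bruno formula by induction on $|\nu|$, following the strategy of Constantine and Savits \cite{Constantine}. First I would check that the formula reduces, in the base case $|\nu|=1$, to the ordinary chain rule. Indeed, if $\nu=e_i$, then the constraints $|k^j|>0$, $|l^j|>0$, $\sum_j|k^j||l^j|=|\nu|=1$ force $s=1$, $|k^1|=|l^1|=1$ and $l^1=e_i$, so that the only free choice is $k^1=e_m$ for some $m\in\{1,\dots,p\}$ (whence $\la=e_m$); summing over $m$ yields $\sum_{m=1}^p(\del_m f)\circ g\cdot \del^{e_i}g_m$, which is precisely the chain rule.

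For the inductive step, assume the formula holds for a given $\nu$ with $|\nu|\geq 1$ and fix a direction $i\in\{1,\dots,n\}$. I would apply $\del^{e_i}$ to both sides of the formula for $\del^\nu(f\circ g)$. On the right-hand side, the Leibniz rule produces two types of contributions: (a) the derivative falls on the factor $(\del^\la f)\circ g$, producing $\sum_{m=1}^p(\del^{\la+e_m}f)\circ g\cdot \del^{e_i}g_m$, which after relabeling $\la':=\la+e_m$ gives a contribution in which one block $(k^{s+1},l^{s+1})=(e_m,e_i)$ is appended; (b) the derivative falls on one of the $|k^j|$ copies of $\del^{l^j}g$ inside the product $\prod_j(\del^{l^j}g)^{k^j}$, replacing that copy by $\del^{l^j+e_i}g$. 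Each of these operations yields an expression of the form prescribed by the formula at level $|\nu|+e_i$, but possibly with a different choice of structural parameters $(s',\{k'^{j'},l'^{j'}\})$.

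The main obstacle is the combinatorial reorganization needed to match the two sides: when the incremented multi-index $l^j+e_i$ coincides with some other existing $l^{j'}$, or with $l^j$ itself, or is genuinely new, the ordered list $l^1\prec\cdots\prec l^s$ must be rearranged, and the multiplicities $k^j$ must be shifted accordingly. One therefore has to verify a combinatorial identity expressing the new weight $\frac{(\nu+e_i)!}{\prod_{j'}k'^{j'}!(l'^{j'}!)^{|k'^{j'}|}}$ as a sum of the contributions produced by each of the operations (a), (b). This is most cleanly done by re-interpreting $\nu$ as a multiset (with $\nu_r$ copies of the symbol $r$ for each $r=1,\dots,n$) and the data $(s,\{k^j,l^j\})$ as an ordered set-partition of this multiset together with a coloring of blocks by the indices $1,\dots,p$; the inductive step then becomes a transparent statement about inserting one new element of color $i$ either into an existing block (raising $l^j$ to $l^j+e_i$) or into a new singleton block (creating a new pair $(e_m,e_i)$). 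The multinomial coefficient $\nu!\prod_j\tfrac{1}{k^j!(l^j!)^{|k^j|}}$ is exactly the number of such ordered set-partitions producing a given pattern, so the required identity is immediate in this interpretation, completing the induction.
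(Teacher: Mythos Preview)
The paper does not give its own proof of this theorem: it simply quotes the statement with a reference to Constantine and Savits \cite{Constantine} and then uses the formula as a black box throughout Chapter~5 (e.g.\ in Lemmas~\ref{B1-lem}, \ref{cchange}, \ref{Ssigma}, \ref{lemGsigma}, \ref{derivphase}). So there is nothing in the paper to compare your argument against.

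That said, your sketch is the standard induction-on-$|\nu|$ proof and is essentially the strategy of \cite{Constantine}. The base case and the description of the two contributions (a), (b) from Leibniz are correct, and your multiset/set-partition reinterpretation of the coefficient $\nu!\prod_j\tfrac{1}{k^j!(l^j!)^{|k^j|}}$ is exactly the right combinatorial bookkeeping device for handling the collisions $l^j+e_i=l^{j'}$ and the reordering of the $l$'s. Nothing is missing at the level of ideas; to turn this into a complete proof you would only need to write out explicitly the case distinction (new $l^j+e_i$ versus coincidence with an existing $l^{j'}$ versus $l^{j'}=e_i$ already present when handling (a)) and check the coefficient identity in each case, which is routine.
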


Note that by Lemma \ref{B1-lem}, if $(M,\exp)$ satisfies $(S_\sigma 1)$ (resp. $(\O_M 1)$), then $(S_\sigma 2)$ (resp. $(\O_M 2)$) is equivalent to: for any $z',z''\in M$, there exists a frame $(z,\bfr)$ such that $
\del^\a_{z,\bfr} \tau_{z'}^{-1}\tau_{z''} (x) =\O(\langle x \rangle_{z,\bfr}^{-\sigma |\a|})$ (resp. $\O(\langle x \rangle_{z,\bfr}^{p_{\a}})$ for a $p_\a\geq 1$)
for any $n$-multi-index $\a$. 

As the following proposition shows, $S_\sigma$ or $\O_M$-bounded geometry properties can be transported by any diffeomorphism.

\begin{prop}
\label{ssigmDiff}
If $(M,\exp_M)$ has a $S_\sigma$ (resp. $\O_M$) bounded geometry, $N$ a smooth manifold and $\varphi: M\to N$  is a diffeomorphism, then $(N,\exp_N:=\varphi\circ \exp_M \circ \  d\varphi^{-1})$ has a $S_\sigma$ (resp. $\O_M$) bounded geometry.
\end{prop}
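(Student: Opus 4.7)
The plan is to exploit the fact that the normal coordinate charts on $N$ factor through the corresponding charts on $M$ via $\varphi$, and that this factor cancels when passing to transition maps.

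First, I would unpack the definition of $\exp_N$ to express $n_w^{\mathfrak{c}}$ in terms of data on $M$. For $w\in N$, set $z:=\varphi^{-1}(w)\in M$, and for a basis $\mathfrak{c}$ of $T_wN$ let $\mathfrak{b}:=(d\varphi^{-1})_w(\mathfrak{c})$, which is a basis of $T_zM$. By definition
$$
(\exp_N)_w = \varphi\circ (\exp_M)_z \circ (d\varphi^{-1})_w,
$$
and since $(\exp_M)_z$ and $(d\varphi^{-1})_w$ are diffeomorphisms (the latter a linear isomorphism), condition $(i)$ in the definition of an exponential manifold holds for $(N,\exp_N)$. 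Inverting and composing with $L_{\mathfrak{c}}$ gives
$$
n_w^{\mathfrak{c}} = L_{\mathfrak{c}}\circ (d\varphi)_z \circ (\exp_M)_z^{-1}\circ \varphi^{-1} = L_{\mathfrak{b}}\circ (\exp_M)_z^{-1}\circ \varphi^{-1} = n_z^{\mathfrak{b}}\circ \varphi^{-1},
$$
where the middle equality uses the identity $L_{\mathfrak{c}}\circ (d\varphi)_z = L_{\mathfrak{b}}$, which is immediate from the definitions of $L_{\mathfrak{c}}$ and $\mathfrak{b}$.

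Next, I would apply this identity to two frames $(w,\mathfrak{c})$ and $(w',\mathfrak{c}')$ on $N$, with corresponding $(z,\mathfrak{b})$ and $(z',\mathfrak{b}')$ on $M$. The transition map on $N$ becomes
$$
\psi_{w,w'}^{\mathfrak{c},\mathfrak{c}'} \,=\, n_w^{\mathfrak{c}}\circ (n_{w'}^{\mathfrak{c}'})^{-1} \,=\, n_z^{\mathfrak{b}}\circ\varphi^{-1}\circ\varphi\circ (n_{z'}^{\mathfrak{b}'})^{-1} \,=\, \psi_{z,z'}^{\mathfrak{b},\mathfrak{b}'},
$$
so the transition maps on $N$ are literally the transition maps on $M$. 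Consequently, for every multi-index $\alpha$, $\partial^{\alpha}\psi_{w,w'}^{\mathfrak{c},\mathfrak{c}'}=\partial^{\alpha}\psi_{z,z'}^{\mathfrak{b},\mathfrak{b}'}$, so any growth estimate of type $(S_\sigma 1)$ or $(\O_M 1)$ valid on $M$ immediately yields the same estimate on $N$.

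There is essentially no obstacle; the only subtle point is identifying the linear isomorphism $L_{\mathfrak{c}}\circ (d\varphi)_z$ with $L_{\mathfrak{b}}$ for the pullback basis $\mathfrak{b}$, which is really just the transformation rule for coordinates in a basis. Everything else is a direct computation with the definitions. I would present the argument in one short proof consisting of the two displayed identities above, followed by a single sentence transferring the $S_\sigma$ or $\O_M$ estimate.
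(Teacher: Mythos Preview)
Your proposal is correct and follows essentially the same approach as the paper: both arguments identify the transition map $\psi_{w,w'}^{\mathfrak{c},\mathfrak{c}'}$ on $N$ with the transition map $\psi_{z,z'}^{\mathfrak{b},\mathfrak{b}'}$ on $M$ for the pullback frames, via the key identity $L_{\mathfrak{c}}\circ (d\varphi)_z = L_{\mathfrak{b}}$, and then simply read off the $S_\sigma$ or $\O_M$ estimates.
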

\begin{proof} Let us note $\psi_{z,z',N}^{\bfr,\bfr'}:=n_{z,N}^\bfr \circ (n_{z',N}^{\bfr'})^{-1}$ where $n_{z,N}^\bfr:= L_\bfr \circ \exp_{N,z}^{-1}$ and $(z,\bfr)$, $(z',\bfr')$ are two frames on $N$.
Since $\exp_{z',N}=\varphi \circ \exp_{M,\varphi^{-1}(z')}\circ (d\varphi^{-1})_{z'}$ and $\exp_{N,z}^{-1}= (d\varphi^{-1})_z^{-1}\circ \exp_{M,\varphi^{-1}(z)}^{-1}\circ\ \varphi^{-1}$, we obtain $\psi_{z,z',N}^{\bfr,\bfr'} = \psi_{\varphi^{-1}(z),\varphi^{-1}(z'),M}^{\bfr_z,\bfr'_{z'}}$ where $\bfr_z$ is the basis of $T_{\varphi^{-1}(z)}(M)$ such that $L_{\bfr_z}= L_\bfr\circ (d\varphi)_{\varphi^{-1}(z)}$. The result follows.
\end{proof}

The following technical lemma will be used for Fourier transform and the definition of rapidly decreasing section spaces over the tangent and cotangent bundle in section 3. It will also give the behaviour of symbols under coordinate change. 

\begin{lem}
\label{cchange} Let $(z,\bfr)$, $(z',\bfr')$ be given frames.

\noindent (i) We can express $\del_{z,\bfr}^{(\a,\b)}$ as an operator on $C^{\infty}(T^*M,\mathfrak{E})$ (resp. $C^{\infty}(TM,\mathfrak{E})$) , where $(\a,\b)$ is a $2n$-multi-index,  with the following finite sum:
$$
\del_{z,\bfr}^{(\a,\b)} = \underset{|\b'|\geq |\b|}{\sum_{0\leq |(\a',\b')|\leq |(\a,\b)|}}\, f_{\a,\b,\a',\b'}\,   \del_{z',\bfr'}^{(\a',\b')} 
$$
where the $f_{\a,\b,\a',\b'}$ are smooth functions on $T^*M$ (resp. $TM$) such that

(a) if $(M,\exp)$ has a $S_\sigma$-bounded geometry for a given $\sigma\in [0,1]$, there exists $C_{\a,\b}>0$ such that for any $(x,\th)\in T^*M$ (resp. $TM$),

\begin{equation}
\label{fabab}
|f_{\a,\b,\a',\b'}(x,\th)|\leq    C_{\a,\b}\langle x \rangle_{z,\bfr }^{\sigma(|\a'|-|\a|)}\langle \th \rangle_{z,\bfr,x}^{|\b'|-|\b|}.
\end{equation}

(b) if $(M,\exp)$ has a $\O_M$-bounded geometry, there exist $C_{\a,\b}>0$ and $q_{\a,\b}\geq 1$ such that for any $(x,\th)\in T^*M$ (resp. $TM$),

\begin{equation}
\label{fabab2}
|f_{\a,\b,\a',\b'}(x,\th)|\leq    C_{\a,\b}\langle x \rangle_{z,\bfr }^{q_{\a,\b}}\langle \th \rangle_{z,\bfr,x}^{|\b'|-|\b|}.
\end{equation}

\noindent (ii) We can express $\del_{z,\bfr}^{(\a,\b)}$ as an operator on $C^{\infty}(M\times M,\mathfrak{E})$,  with the following finite sum:
$$
\del_{z,\bfr}^{(\a,\b)} = \underset{0\leq |\b'|\leq |\b|}{\sum_{0\leq |\a'|\leq |\a|}}\, f_{\a,\b,\a',\b'}\,   \del_{z',\bfr'}^{(\a',\b')} 
$$
where the $f_{\a,\b,\a',\b'}$ are smooth functions on $M\times M$ such that 

(a) if $(M,\exp)$ has a $S_\sigma$-bounded geometry for a given $\sigma\in [0,1]$, there exists $C_{\a,\b}>0$ such that for any $(x,y)\in M\times M$,

\begin{equation}
\label{fabab3}
|f_{\a,\b,\a',\b'}(x,y)|\leq    C_{\a,\b}\,\langle x \rangle_{z,\bfr }^{\sigma(|\a'|-|\a|)}\,\langle y \rangle_{z,\bfr}^{\sigma(|\b'|-|\b|)}.
\end{equation}

(b) if $(M,\exp)$ has a $\O_M$-bounded geometry, there exist $C_{\a,\b}>0$ and $q_{\a},q_{\b}\geq 1$ such that for any $(x,y)\in M\times M$,

\begin{equation}
\label{fabab4}
|f_{\a,\b,\a',\b'}(x,y)|\leq    C_{\a,\b}\,\langle x \rangle_{z,\bfr}^{q_{\a}}\,\langle y \rangle_{z,\bfr}^{q_{\b}}.
\end{equation}

\end{lem}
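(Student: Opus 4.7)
The plan is to pull back the computation to $\R^{2n}$ through the charts $n_{z,*}^\bfr$, $n_{z,T}^\bfr$, $n_{z,M^2}^\bfr$ and to apply the multivariate Faa di Bruno formula (Theorem \ref{FaaCS}) to the transition diffeomorphism $\Psi$ of $\R^{2n}$. Writing $\Psi:=n_{z,*}^\bfr\circ(n_{z',*}^{\bfr'})^{-1}$ in the cotangent case and using $\wt M_{z,x}^\bfr={}^t(dn_z^\bfr)_x^{-1}$ together with $\psi_{z',z}^{\bfr',\bfr}=(\psi_{z,z'}^{\bfr,\bfr'})^{-1}$, a direct computation yields the explicit form
$$
\Psi(\rx,\vth)=\bigl(\psi_{z,z'}^{\bfr,\bfr'}(\rx),\,M(\rx)\vth\bigr),\qquad M(\rx):={}^t(d\psi_{z',z}^{\bfr',\bfr})_{\psi_{z,z'}^{\bfr,\bfr'}(\rx)}.
$$
For the tangent case one finds the same shape with $M(\rx)=(d\psi_{z,z'}^{\bfr,\bfr'})_\rx$, and for $M\times M$ one gets the totally decoupled $\Psi(\rx,\ry)=(\psi_{z,z'}^{\bfr,\bfr'}(\rx),\psi_{z,z'}^{\bfr,\bfr'}(\ry))$. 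The decisive structural feature of the first two cases is that $\Psi$ is affine in its second block of variables.

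Applying Theorem \ref{FaaCS} to $\del^{(\a,\b)}(f\circ(n_{z',*}^{\bfr'})^{-1}\circ\Psi)$ produces the required finite expansion with coefficients $f_{\a,\b,\a',\b'}=P_{(\a,\b),(\a',\b')}(\Psi)\circ n_{z,*}^\bfr$, each $P$ being a sum of products $\prod_{j=1}^{s}(\del^{l^j}\Psi)^{k^j}$ with $|k^j|>0$, $\sum_j k^j=(\a',\b')$, $\sum_j|k^j|l^j=(\a,\b)$. Split each multi-index as $k^j=(k^{j,1},k^{j,2})$ and $l^j=(l^j_\rx,l^j_\vth)$. The affinity of $\Psi$ in $\vth$ forces $\del^{l^j}\Psi\equiv 0$ on its first $n$ components whenever $l^j_\vth\neq 0$, and on its last $n$ components whenever $|l^j_\vth|\geq 2$. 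Hence any nonvanishing contributing term satisfies $l^j_\vth=0$ whenever $k^{j,1}\neq 0$, and $|l^j_\vth|\leq 1$ always. Setting $S:=\{j:l^j_\vth\neq 0\}$, for $j\in S$ one has $k^{j,1}=0$, $|k^j|=|k^{j,2}|$, and $l^j_\vth$ is a standard unit vector, so
$$
|\b|=\sum_j|k^j||l^j_\vth|=\sum_{j\in S}|k^{j,2}|\leq\sum_j|k^{j,2}|=|\b'|,
$$
which yields $|\b'|\geq|\b|$ and, combined with $|(\a',\b')|\leq|(\a,\b)|$, also $|\a'|\leq|\a|$.

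For the size bound (\ref{fabab}) under $(S_\sg 1)$, I would use $\del^{l^j_\rx}\psi_{z,z'}^{\bfr,\bfr'}(\rx)=\O(\langle\rx\rangle^{-\sg(|l^j_\rx|-1)})$ for $|l^j_\rx|\geq 1$; and, by a second application of Faa di Bruno to the composition defining $M$ together with $\psi_{z,z'}^{\bfr,\bfr'}\asymp\Id$ (Lemma \ref{B1-lem}(i)) to replace $\langle\psi_{z,z'}^{\bfr,\bfr'}(\rx)\rangle$-bounds by $\langle\rx\rangle$-bounds, I would derive $\del^{l^j_\rx}M(\rx)=\O(\langle\rx\rangle^{-\sg|l^j_\rx|})$ for all $|l^j_\rx|\geq 0$. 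Each factor $(\del^{l^j}\Psi)^{k^j}$ is then bounded by a constant times $\langle\rx\rangle^{-\sg|k^{j,1}|(|l^j_\rx|-1)-\sg|k^{j,2}||l^j_\rx|}\langle\vth\rangle^{|k^{j,2}|\,\mathbf{1}_{\{l^j_\vth=0\}}}$. Summing over $j$, the $\rx$-exponent telescopes via the identity
$$
|\a|-|\a'|=\sum_j|k^{j,1}|(|l^j_\rx|-1)+\sum_j|k^{j,2}||l^j_\rx|
$$
into $\sg(|\a'|-|\a|)$, and the $\vth$-exponent telescopes into $\sum_j|k^{j,2}|-\sum_{j\in S}|k^{j,2}|=|\b'|-|\b|$. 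Transferring back through $n_{z,*}^\bfr$ and substituting $\langle x\rangle_{z,\bfr}$, $\langle\th\rangle_{z,\bfr,x}$ gives (\ref{fabab}); bound (\ref{fabab2}) follows identically from $(\O_M 1)$ by replacing every $\sg$-weighted exponent with polynomial growth, the $\vth$-exponent $|\b'|-|\b|$ being untouched since it comes from the linear-in-$\vth$ structure, not from the growth. For (ii), the full decoupling of $\rx$ and $\ry$ in $\Psi$ makes the Faa di Bruno expansion factorise, and Lemma \ref{B1-lem}(ii) applied separately in each block produces $|\a'|\leq|\a|$, $|\b'|\leq|\b|$ together with (\ref{fabab3}) and (\ref{fabab4}).

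The principal obstacle is the combinatorial bookkeeping: one must simultaneously extract the inequality $|\b'|\geq|\b|$ from the affine-in-$\vth$ structure of $\Psi$ and verify that the naive bounds on the individual factors $(\del^{l^j}\Psi)^{k^j}$ collapse, via the identity for $|\a|-|\a'|$, into the claimed exponent $\sg(|\a'|-|\a|)$. The auxiliary estimate $\del^{l^j_\rx}M(\rx)=\O(\langle\rx\rangle^{-\sg|l^j_\rx|})$, while elementary, is the technically heaviest step since it itself requires Faa di Bruno applied to the composition defining $M$ and a careful use of the equivalence $\psi_{z,z'}^{\bfr,\bfr'}\asymp\Id$ to commute scales.
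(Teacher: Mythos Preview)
Your approach is essentially identical to the paper's proof: pull back through the chart, observe that the transition map is affine in the fibre variable, apply Faa di Bruno, and do the combinatorial bookkeeping via the constraints $\sum_j k^j=(\a',\b')$, $\sum_j|k^j|l^j=(\a,\b)$ to extract both $|\b'|\geq|\b|$ and the exponents. One small slip: the map you must Faa-di-Bruno is $\psi_*:=n_{z',*}^{\bfr'}\circ(n_{z,*}^\bfr)^{-1}$, not its inverse $\Psi$ as you defined it (otherwise $f\circ(n_{z',*}^{\bfr'})^{-1}\circ\Psi$ is not $f\circ(n_{z,*}^\bfr)^{-1}$); since $\psi_*$ has the same affine-in-$\vth$ structure with the roles of $(z,\bfr)$ and $(z',\bfr')$ swapped, your estimates and telescoping identities carry over verbatim.
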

\begin{proof} 
$(i)$ Suppose that $(M,\exp)$ has a $S_\sigma$-bounded geometry.
Let us note $\psi_*:=n_{z',*}^{\bfr'}\circ (n_{z,*}^{\bfr})^{-1}$ and $\psi_T:=n_{z',T}^{\bfr'}\circ (n_{z,T}^{\bfr})^{-1}$.
We have $\psi_*=(\psi_{z',z}^{\bfr',\bfr}\circ \pi_1,L)$ where $\pi_1$ is the projection from $\R^{2n}$ onto the first copy of $\R^n$ in $\R^{2n}$ and $L$ is the smooth map from $\R^{2n}$ to $\R^n$ defined as $L(\rx,\vth):=\, ^t(d\psi_{z',z}^{\bfr',\bfr})^{-1}_{\rx}(\vth)=\, ^t(d\psi_{z,z'}^{\bfr,\bfr'})_{\psi_{z',z}^{\bfr',\bfr}(\rx)}(\vth)$. Noting $(L_i)_{1\leq i\leq n}$ the components of $L$, we have $L_i(\rx,\vth)=\sum_{1\leq p\leq n} L_{i,p}(\rx)\,\vth_{p}$, where $L_{i,p}:=(\del_i \psi_{z,z'}^{\bfr,\bfr'})_p\circ \psi_{z',z}^{\bfr',\bfr}$. As a consequence, for $1\leq i\leq n$ and $\a,\b$, $n$-multi-indices such that $|(\a,\b)|>0$
\begin{align*}
&(\del^{(\a,\b)}\psi_*)_i=  \delta_{\b,0}(\del^{\a}\psi_{z',z}^{\bfr',\bfr})_i \circ \pi_1\, ,\qquad (\del^{(\a,\b)}\psi_*)_{n+i}=(\del^{(\a,\b)}L)_i\, , \\
&(\del^{(\a,\b)}L)_i(\rx,\vth) = \sum_{1\leq p\leq n} (\del^{\a} L_{i,p})(\rx)\  F_{\b,p}(\vth)\,,\\ 
&\del^{\a} L_{i,p} = \sum_{1\leq |\a'|\leq |\a|} P_{\a,\a'}(\psi_{z',z}^{\bfr',\bfr})\ ((\del^{\a'+e_i} \psi_{z,z'}^{\bfr,\bfr'})_p\circ \psi_{z',z}^{\bfr',\bfr})\quad \text{ if } \quad |\a|>0 \, ,
\end{align*}
where $F_{\b,p}(\vth)$ is equal to $\vth_p$ if $\b=0$, to $\delta_{p,r}$ if $\b=e_r$, and to 0 otherwise. We get from the proof of Lemma \ref{B1-lem} that (for $1\leq |\a'|\leq |\a|$) $P_{\a,\a'}(\psi_{z',z}^{\bfr',\bfr})(\rx)=\O(\langle \rx \rangle^{-\sigma(|\a|-|\a'|)})$. As a consequence, using (\ref{i-first}), we see that $\del^{\a} L_{i,p}(\rx)=\O(\langle \rx \rangle^{-\sigma|\a|})$. Thus, if $|\b|>1$, $\del^{(\a,\b)}\psi_* = 0$ and 
\begin{align*}
&\text{ if } \b=0\, , \quad \ (\del^{(\a,\b)}\psi_*)_i(\rx,\vth) = \O(\langle \rx \rangle^{-\sigma(|\a|-1)}) \quad  \text{ and }  \quad  (\del^{(\a,\b)}\psi_*)_{n+i}(\rx,\vth) = \O(\langle\rx\rangle^{-\sigma |\a|}\langle \vth \rangle )\, , \\
&\text{ if } |\b|=1\, , \quad (\del^{(\a,\b)}\psi_*)_i =0 \quad \hspace{3cm} \text{ and } \quad   (\del^{(\a,\b)}\psi_*)_{n+i}(\rx,\vth) = \O(\langle\rx\rangle^{-\sigma |\a|})\, .
\end{align*}
Similar results hold for $\psi_T$, the only difference is that we just have to take $\wt L:=(d\psi_{z',z}^{\bfr',\bfr})_{\rx}(\vth)$ instead of $L$.

\noindent We have for any $f\in C^\infty(T^*M,\mathfrak{E})$, 
$$
\del_{z,\bfr}^{\nu} (f) = \del^{\nu}(f\circ (n_{z,*}^\bfr)^{-1}) \circ n_{z,*}^\bfr = \del^{\nu}(f\circ (n_{z',*}^{\bfr'})^{-1} \circ \psi_*) \circ n_{z,*}^\bfr\, .
$$
Using again the Faa di Bruno formula in Theorem \ref{FaaCS}, we get 
$$
\del_{z,\bfr}^{\nu} =  \sum_{1\leq |\nu'|\leq |\nu|} (P_{\nu,\nu'}(\psi_*)\circ n_{z,*}^\bfr) \ \del_{z',\bfr'}^{\nu'} =:\sum_{1\leq |\nu'|\leq |\nu|} f_{\nu,\nu'} \ \del_{z',\bfr'}^{\nu'}
$$ 
where $P_{\nu,\nu'}(\psi_*)$ is a linear combination of terms of the form $\prod_{j=1}^{s} (\del^{l^j} \psi_*)^{k^j} $, where $1\leq s\leq |\nu|$,  
the $k^j$ and $l^j$ are $2n$-multi-indices with $|k^j|>0$, $|l^j|>0$, $\sum_{j=1}^s k^j = \nu'$ and $\sum_{j=1}^s|k^j|l^j=\nu$. 

Let us note $l^{j}=:(l^{j,1},l^{j,2})$, $k^{j}=:(k^{j,1},k^{j,2})$ where $l^{j,1},l^{j,2},k^{j,1},k^{j,2}$ are $n$-multi-indices. Thus,
 $$
(\del^{l^j}\psi_*)^{k^j}= \prod_{i=1}^n ((\del^{l^j}\psi_*)_i)^{k^{j,1}_i}\ ((\del^{l^j}\psi_*)_{n+i})^{k^{j,2}_i}
$$
and we get, for a given $s$, $(l^{j})$, $(k^j)$ such that $(\del^{l^j} \psi_*)^{k^j} \neq 0$ for all $1\leq j\leq s$,
\begin{align*}
&\text{ if } l^{j,2}=0\, ,  \qquad   (\del^{l^j}\psi_*)^{k^j} = \O(\langle \rx\rangle^{-\sigma(|l^{j}|-1)|k^{j}| - \sigma |k^{j,2}|}\langle \vth\rangle^{|k^{j,2}|})\, , \\ 
&\text{ if } |l^{j,2}|=1\, ,\qquad k^{j,1}=0 \text{ and } \    (\del^{l^j}\psi_*)^{k^j} = \O(\langle \rx \rangle^{-\sigma (|l^j|-1)|k^j|})\, . 
\end{align*}

Since $k^j\neq 0$ and $(\del^{l^j} \psi_*)^{k^j} \neq 0$, $l^{j,2}$ always satisfies $|l^{j,2}|\leq 1$.
By permutation on the $j$ indices, we can suppose that for $1\leq j\leq j_1-1$, we have $l^{j,2}=0$, for $j_1\leq j\leq s$, we have $|l^{j,2}|=1$, where $1\leq j_1 \leq s+1$. Thus,  
$$
\prod_{j=1}^s (\del^{l^j} \psi_*)^{k^j} = \O(\langle\rx \rangle^{-\sigma(\sum_{j=1}^{s}(|l^j|-1)|k^j| + \sum_{j=1}^{j_1-1}|k^{j,2}|)}\langle \vth \rangle^{\sum_{j=1}^{j_1-1}|k^{j,2}|})\, .
$$
Since, with $\nu=(\a,\b)$, $\nu'=(\a',\b')$, 
$$
\sum_{j=1}^{j_1-1} |k^{j,2}| = \sum_{j=1}^s |k^{j,2}| - \sum_{j=j_1}^{s} |k^{j,2}| = |\b'|-\sum_{j=j_1}^{s} |k^{j}||l^{j,2}| =|\b'|-|\b|\, , 
$$
(\ref{fabab}) follows. If we set $f_{0,0,0,0}:=1$ and $f_{\a,0,0,0}:=0$ if $\a\neq 0$, then for any $2n$-multi-index $(\a,\b)$,
$$
\del_{z,\bfr}^{(\a,\b)} = \underset{|\b'|\geq |\b|}{\sum_{0\leq |(\a',\b')|\leq |(\a,\b)|}}\, f_{\a,\b,\a',\b'}\,   \del_{z',\bfr'}^{(\a',\b')} 
$$
and the estimate (\ref{fabab}) holds for any $f_{\a,\b,\a',\b'}$. In the case of $\O_M$-bounded geometry, the proof is similar, and we obtain for a $r_\nu\geq 1$, $\prod_{j=1}^s (\del^{l^j} \psi_*)^{k^j} = \O(\langle\rx \rangle^{r_\nu}\langle \vth \rangle^{|\b'|-|\b|})$, which gives the result.

\noindent $(ii)$ Replacing $\psi_{*}$ by $\psi_{z',z,M^2}^{\bfr',\bfr}:=n_{z',M^2}^{\bfr'}\circ (n_{z,M^2}^\bfr)^{-1}$ in $(i)$, we obtain the result by similar arguments.
\end{proof}

\subsection{Basic function and distribution spaces}

We suppose in this section that $E$ is an hermitian vector bundle on the exponential manifold $(M,\exp)$. Recall that if $u\in C^\infty(M;E)$ (resp. $C^\infty_c(M;E)$) the Fr\'{e}chet space of smooth sections (resp. the $LF$-space of compactly supported smooth sections) of $E\to M$, we have for any $z\in M$, $u^z:=\tau_z^{-1} u \in C^\infty(M,E_z)$ (resp. $C^\infty_c(M,E_z)$). We define for any frame $(z,\bfr)$ on $M$,
$$
T_{z,\bfr}(u):= u^z \circ (n_z^\bfr)^{-1}.
$$
Thus, $T_{z,\bfr}$ sends sections of $E\to M$ to functions from $\R^n$ to $E_z$ and is in fact a topological isomorphism from $C^\infty(M;E)$ (resp. $C^\infty_c(M;E)$) onto $C^{\infty}(\R^n,E_z)$ (resp. $C^{\infty}_c(\R^n,E_z)$). 

In the following, a density (resp. a codensity) is a smooth section of the complex line bundle over $M$ defined by the disjoint union over $x\in M$ of the complex lines formed by the 1-twisted forms on $T_x M$ (resp. $T_x^*(M)$). Recall that a 1-twisted form on a $n$-dimensional vector space $V$ is a function on $F$ on $\Lambda_n V\backslash \{0\}$ such that 
$$
F(cv)= |c| F(v) \qquad \text{for all } v\in \Lambda_n V\backslash \{0\} \text{ and } c\in \R^*.
$$
For a given frame $(z,\bfr)$, let us note $|dx^{z,\bfr}|$ the density associated to the volume form on $M$: $dx^{z,\bfr}:=dx^{1,z,\bfr}\wedge \cdots \wedge dx^{n,z,\bfr}$ and $|\del_{z,\bfr}|$ the codensity defined as $|\del_{1,z,\bfr}\wedge \cdots \wedge \del_{n,z,\bfr}|$.

Any density (resp. codensity) is of the form $c|dx^{z,\bfr}|$ (resp. $c|\del_{z,\bfr}|$) where $c$ is a smooth function on $M$, and by definition is strictly positive if $c(x)>0$ for any $x\in M$.
 For a given strictly positive density $d\mu$, we also note by $d\mu$ its associated (positive, Borel--Radon, $\sigma$-finite) measure on $M$.  
This allows to define the following Banach spaces of (equivalence classes of) functions on $M$: $L^p(M,d\mu)$ ($1\leq p\leq \infty$). Actually, $L^{\infty}(M):=L^{\infty}(M,d\mu)$ does not depend on the chosen $d\mu$, since the null sets for $d\mu$ are exactly the null sets for any other strictly positive density $d\mu'$ on $M$.

For a given $z\in M$, we denote $L^p(M,E_z,d\mu)$ ($1\leq p<\infty$) and $L^\infty(M,E_z)$ the Bochner spaces on $M$ with values in $E_z$.
$E_z$ is a hermitian complex vector space, so we can identify $E_z$ with its antidual $E'_z$. There is a natural anti-isomorphism between $E'_z$ and the dual of $E_z$ but there is in general no canonical way to identify $E_z$ with its dual with a {\it linear} isomorphism. 
Thus, we shall use antiduals rather than duals in the following. However, $E_z$ is anti-isomorphic with its dual by complex conjugaison on $E'_z$. We shall note $\ol x$ the image under this anti-isomorphism of $x\in E_z$ and $\ol E_z$ the dual of $E_z$.


We denote 
$L^p(M;E,d\mu):=\set{\psi \text{ section of }E\to M\text{ such that }|\psi|^p\in L^1(M,d\mu)}/\sim_{a.e.}$ and 
$L^\infty(M;E):= \set{\psi \text{ section of }E\to M\text{ such that }|\psi|\in L^\infty(M) }/\sim_{a.e.}$
where $\sim_{a.e.}$ the standard ``almost everywhere'' equivalence relation.
Since the $\tau_{xy}$ maps are isometries, for any $z\in M$, the map $\psi \to \tau_z^{-1} \psi$ defines linear isometries: $L^p(M;E,d\mu)\simeq L^p(M,E_z,d\mu)$, and $L^\infty(M;E)\simeq L^\infty(M,E_z)$.
In particular, $L^p(M;E,d\mu)$ and $L^\infty(M;E)$ are Banach spaces and $L^2(M;E,d\mu)$ a Hilbert space.
Moreover, we can define for any $\psi \in L^1(M;E,d\mu)$ and $z\in M$ the following Bochner integral $\int \tau_z^{-1}\psi \in E_z$. We can canonically identify $L^\infty(M; E)$ as the antidual of $L^1(M;E,d\mu)$ and 
$L^2(M;E,d\mu)$ as its own antidual. 
The (strong) antiduals of $C_c^\infty(M;E)$ and $C^\infty(M;E)$ are noted respectively $\DD' (M;E)$ and $\E'(M;E)$. 

We define $G_{\sigma}(\R^p,\mathfrak{E})$ (resp. $S_\sigma(\R^p)$), where $\sg\in [0,1]$, as the space of smooth functions $g$ from $\R^{p}$ into $\mathfrak{E}$ (resp. $\R$) such that for any $p$-multi-index $\nu\neq 0$ (resp. any $p$-multi-index $\nu$), there exists $C_\nu>0$
such that $\norm{\del^{\nu} g (\rv)}\leq C_{\nu} \langle \rv \rangle^{-\sigma (|\nu|-1)}$ (resp. $|\del^{\nu} g (\rv)|\leq C_{\nu} \langle \rv \rangle^{-\sigma |\nu|}$) for any $\rv\in \R^{p}$. We note $\O_M(\R^p,\mathfrak{E})$ the space of smooth $\mathfrak{E}$-valued functions with polynomially bounded derivatives.
We use the shorcuts $G_\sigma(\R^p):=G_\sigma(\R^p,\R^p)$ and $\O_M(\R^p):=\O_M(\R^p,\R)$.

We have the following lemma which will give an equivalent formulation of $S_\sigma$ or $\O_M$-bounded geometry.

\begin{lem}
\label{Ssigma} 
(i) Let $f\in G_{\sigma}(\R^p,\mathfrak{E})$ (resp. $S_\sigma(\R^p)$) and $g\in G_\sigma(\R^{n},\R^p)$ such that, if $\sigma>0$, there exists $\eps>0$ such that $\langle g(\rv) \rangle \geq \eps \langle \rv \rangle$ for any $\rv\in \R^n$. Then $f\circ g \in G_\sigma(\R^n,\mathfrak{E})$ (resp. $S_\sigma(\R^n)$).

\noindent (ii) The set $G_\sigma^\times(\R^p)$ of diffeomorphisms $g$ on $\R^p$ such that $g$ and $g^{-1}$ are in $G_\sigma(\R^p)$ is a subgroup of $\mathrm{Diff}(\R^p)$ and contains $GL_p(\R)$ as a subgroup.  

\noindent (iii) We have $\O_M(\R^p,\mathfrak{E})\circ \O_M(\R^n,\R^p) \subseteq \O_M(\R^n,\mathfrak{E})$. In particular, the space $\O_M(\R^p,\R^p)$ is a monoid under the composition of functions. The set of inversible elements of the monoid $\O_M(\R^p,\R^p)$, noted $\O_M^\times(\R^p,\R^p)$, is a subgroup of $\mathrm{Diff}(\R^p)$ and contains $G_\sigma^\times(\R^p)$ as a subgroup. 

\noindent (iv)
$(M,\exp)$ has a $S_\sigma$ (resp. $\O_M$)-bounded geometry if and only if there exists a frame $(z_0,\bfr_0)$ such that for any frame $(z,\bfr)$, $\psi_{z_0,z}^{\bfr_0,\bfr}\in G_{\sigma}^\times(\R^n)$ (resp. $\O_M^\times(\R^n,\R^n)$).

\noindent (v) The set, noted $S_\sigma^\times(\R^p)$ (resp. $\O_M^\times(\R^p)$), of smooth functions $f:\R^p\to \R^*$ such that $f$ and $1/f$ are in $S_\sigma(\R^p)$ (resp. $\O_M(\R^p)$) is a commutative group under pointwise multiplication of functions. Moreover, $S_\sigma^\times(\R^p) \leq S_{\sigma'}^\times(\R^p) \leq \O_M^\times(\R^p)$ if $1\geq \sigma \geq \sigma'\geq 0$.  

\noindent (vi) If $g\in G_\sigma^\times (\R^p)$ (resp. $\O_M^\times(\R^p,\R^p)$) then its Jacobian determinant $J(g)$ is in $S_\sigma^\times(\R^p)$ (resp. $\O_M^\times(\R^p)$).
\end{lem}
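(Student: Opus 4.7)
The proof is an exercise in the Faà di Bruno formula (Theorem \ref{FaaCS}) combined with careful tracking of polynomial/symbol-type estimates; (i) is the key composition lemma, from which (ii)--(vi) follow by standard algebraic manipulations.

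For (i), fix an $n$-multi-index $\nu \neq 0$. Theorem \ref{FaaCS} expresses $\del^\nu(f\circ g)$ as a finite linear combination of terms $(\del^\la f)\circ g \cdot \prod_{j=1}^s (\del^{l^j}g)^{k^j}$ with $1 \leq |\la|\leq |\nu|$, $|k^j|,|l^j|\geq 1$, $\sum_j k^j = \la$ and $\sum_j |k^j| l^j = \nu$. Using the hypothesis $\langle g(\rv)\rangle \geq \eps \langle \rv\rangle$ (vacuous when $\sigma = 0$), for $f\in G_\sigma$ the factor $(\del^\la f)\circ g$ is $\O(\langle \rv\rangle^{-\sigma(|\la|-1)})$, while each $\del^{l^j}g$ is $\O(\langle \rv\rangle^{-\sigma(|l^j|-1)})$. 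Multiplying and using $\sum_j (|l^j|-1)|k^j| = |\nu|-|\la|$, the total exponent is exactly $-\sigma(|\nu|-1)$, proving $f\circ g \in G_\sigma$. The $S_\sigma$ case is the identical computation with $|\la|$ in place of $|\la|-1$, yielding $-\sigma|\nu|$.

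For (ii), the group axioms require checking closure under composition and inversion. The identity is trivially in $G_\sigma^\times(\R^p)$. For any $g \in G_\sigma^\times(\R^p)$, the bound on $\del g$ gives $\langle g(\rv)\rangle \leq K\langle \rv\rangle$, and the same bound applied to $g^{-1}$ together with $\rv = g^{-1}(g(\rv))$ yields $\langle g(\rv)\rangle \geq \eps \langle \rv\rangle$, so the hypothesis of (i) is automatically satisfied; hence $g_1\circ g_2$ and $(g_1\circ g_2)^{-1} = g_2^{-1}\circ g_1^{-1}$ both lie in $G_\sigma$. Linear maps $g(\rv) = A\rv$ have $\del g \equiv A$ and $\del^\nu g = 0$ for $|\nu|\geq 2$, giving $GL_p(\R)\subseteq G_\sigma^\times(\R^p)$. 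For (iii), the same Faà di Bruno argument with polynomial bounds $\langle \rv\rangle^{p_\la}$ in place of $\langle \rv\rangle^{-\sigma(|\la|-1)}$ gives $\O_M \circ \O_M \subseteq \O_M$; the group structure on $\O_M^\times$ is analogous to $G_\sigma^\times$. The inclusion $G_\sigma^\times(\R^p)\subseteq \O_M^\times(\R^p,\R^p)$ follows because $G_\sigma$ functions have bounded derivatives (hence polynomially bounded), and we may replace $g$ by $g^{-1}$.

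For (iv), the forward implication is immediate from the definition of $S_\sigma$-bounded geometry: each $\psi_{z_0,z}^{\bfr_0,\bfr}$ and its inverse $\psi_{z,z_0}^{\bfr,\bfr_0}$ satisfy $(S_\sigma 1)$, which is exactly the $G_\sigma$-condition on their non-constant derivatives. For the reverse, given frames $(z,\bfr)$ and $(z',\bfr')$, write $\psi_{z,z'}^{\bfr,\bfr'} = \psi_{z,z_0}^{\bfr,\bfr_0}\circ \psi_{z_0,z'}^{\bfr_0,\bfr'}$ and invoke the group property (ii). The $\O_M$ case is analogous via (iii).

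For (v), the Leibniz rule gives $\del^\nu(fg) = \sum_{\mu\leq \nu}\binom{\nu}{\mu}\del^\mu f \cdot \del^{\nu-\mu}g$; if $f,g\in S_\sigma^\times$, each term is $\O(\langle\rv\rangle^{-\sigma|\mu|-\sigma|\nu-\mu|}) = \O(\langle\rv\rangle^{-\sigma|\nu|})$, showing $fg \in S_\sigma$; the same reasoning applied to $1/(fg) = (1/f)(1/g)$ shows $fg \in S_\sigma^\times$. Commutativity and associativity are clear, the identity is $\mathbf{1}$, and the inverse of $f$ is $1/f$ by definition of $S_\sigma^\times$. The inclusion $S_\sigma^\times \subseteq S_{\sigma'}^\times$ for $\sigma \geq \sigma'$ follows from $\langle\rv\rangle^{-\sigma|\nu|}\leq \langle\rv\rangle^{-\sigma'|\nu|}$, and $S_{\sigma'}^\times \subseteq \O_M^\times$ because all derivatives of an $S_{\sigma'}$ function are bounded, hence polynomially bounded. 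For (vi), $J(g) = \det(\del g)$ is a polynomial in the entries of $\del g$; each entry $\del_i g_j$ satisfies, for $g\in G_\sigma$, the estimate $\del^\nu \del_i g_j = \O(\langle \rv \rangle^{-\sigma|\nu|})$ (since $|\nu|+1 \geq 1$), hence lies in $S_\sigma$. The product rule of (v) then gives $J(g)\in S_\sigma$, and the identity $J(g^{-1})\circ g = 1/J(g)$ combined with (i) (applied to $J(g^{-1})\in S_\sigma$ and $g\in G_\sigma^\times$) yields $1/J(g)\in S_\sigma$, so $J(g)\in S_\sigma^\times$. The $\O_M$-case is parallel. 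The main subtlety throughout is verifying that the lower bound $\langle g(\rv)\rangle \geq \eps \langle \rv\rangle$ holds whenever we wish to apply (i); once this is established from the $G_\sigma^\times$ hypothesis, everything else is bookkeeping.
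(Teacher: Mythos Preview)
Your proof is correct and follows essentially the same approach as the paper: the Faà di Bruno formula for (i), the lower bound $\langle g(\rv)\rangle \geq \eps\langle \rv\rangle$ derived from the boundedness of $\del g^{-1}$ for (ii), the factorisation $\psi_{z,z'}^{\bfr,\bfr'} = \psi_{z,z_0}^{\bfr,\bfr_0}\circ\psi_{z_0,z'}^{\bfr_0,\bfr'}$ for (iv), Leibniz for (v), and the identity $1/J(g) = J(g^{-1})\circ g$ for (vi). Your write-up is in fact somewhat more explicit than the paper's (e.g.\ spelling out the $GL_p(\R)$ inclusion and the exponent computation in (i)), but the underlying arguments are identical.
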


\begin{proof} 
$(i)$ The Faa di Bruno formula yields for any $n$-multi-index $\nu\neq 0$,
\begin{equation}
\del^\nu (f\circ g) = \sum_{1\leq |\la|\leq |\nu|} (\del^\la f)\circ g\  P_{\nu,\la}(g) \label{delnufg}
\end{equation}
where $P_{\nu,\la}(g)$ is a linear combination (with coefficients independent on $f$ and $g$) of functions of the form $\prod_{j=1}^s (\del^{l^j} g)^{k^j}$ where $s\in \set{1,\cdots ,|\nu|}$.  
The $k^j$ are $p$-multi-indices and the $l^j$ are $n$-multi-indices (for $1\leq j\leq s$) such that $|k^j|>0$, $|l^j|>0$, $\sum_{j=1}^s k^j = \la$ and $\sum_{j=1}^s |k^j| l^j= \nu$. As a consequence, since $g\in G_\sigma(\R^{n},\R^p)$, for each $\nu,\la$ with $1\leq |\la|\leq |\nu|$ there exists $C_{\nu,\la}>0$ such that for any $\rv\in \R^n$,
\begin{equation}\label{pnulag1}
|P_{\nu,\la}(g) (\rv)|\leq C_{\nu,\la} \langle \rv \rangle^{-\sigma(|\nu|-|\la|)}\, . 
\end{equation}
Moreover, if $f\in G_\sigma(\R^p,\mathfrak{E})$ (resp. $S_\sigma(\R^p)$), there is $C'_\la>0$ such that for any $\rv\in \R^n$, $\norm{(\del^\la f) \circ g(\rv)}\leq C'_\la \langle \rv\rangle ^{-\sigma(|\la|-1)}$ (resp. $|(\del^\la f) \circ g(\rv)|\leq C'_\la \langle \rv\rangle ^{-\sigma|\la|}$). The result now follows from (\ref{delnufg}) and (\ref{pnulag1}).

\noindent $(ii)$ Let $f$ and $g$ in $G_\sigma^\times(\R^p)$. We have $\del_i g^{-1} = \O(1)$ for any $i\in \set{1,\cdots,p}$. Taylor--Lagrange inequality of order 1 entails that $\langle g^{-1}(v)\rangle =\O(\langle v\rangle)$ and thus there is $\eps>0$ such that $\langle g(\rv) \rangle \geq \eps \langle \rv \rangle$ for any $\rv\in \R^n$. With $(i)$, we get $f\circ g\in G_\sigma(\R^p)$. The same argument shows that $g^{-1}\circ f^{-1} \in G_\sigma(\R^p)$.

\noindent $(iii)$ Direct consequence of Theorem \ref{FaaCS}.

\noindent $(iv)$
The only if part is obvious. Suppose then that for any frame $(z,\bfr)$, $\psi_{z_0,z}^{\bfr_0,\bfr}\in G_{\sigma}^\times(\R^n)$ (resp. $\O_M^\times(\R^n,\R^n)$. Let $(z,\bfr)$, $(z',\bfr')$ be two frames. We have $\psi_{z,z'}^{\bfr,\bfr'}=\psi_{z,z_0}^{\bfr,\bfr_0}\circ \psi_{z_0,z'}^{\bfr_0,\bfr'}$. So, by $(ii)$ (resp. $(iii)$),  $\psi_{z,z'}^{\bfr,\bfr'}\in G_{\sigma}^\times(\R^n)$ (resp. $\O_M^\times(\R^n,\R^n)$), which yields the result. 

\noindent $(v)$ By Leibniz rule, the spaces $S_\sigma(\R^p)$ and $\O_M(\R^p)$ are $\R$-algebras under the pointwise product of functions. The result follows.

\noindent $(vi)$ Consequence of $(ii)$, $(iii)$, $1/J(g) = J(g^{-1})\circ g$ and the fact that $S_\sigma(\R^p)$ (resp. $\O_M(\R^p)$) is stable under the pointwise product of functions.
\end{proof}

Remark that for any $g \in G_\sigma^\times(\R^p)$, we have $g\asymp\Id_{\R^p}$.
The multiplication by a function in $\O_M^\times (\R^n)$ is a topological isomorphism from the Fr\'{e}chet space of rapidly decaying $E_z$-valued functions $\S(\R^n,E_z)$ onto itself. 
If we denote $J_{z,z'}^{\bfr,\bfr'}$ the Jacobian of $\psi_{z,z'}^{\bfr,\bfr'}$, then $1/{J_{z,z'}^{\bfr,\bfr'}}=J_{z',z}^{\bfr',\bfr}\circ \psi_{z,z'}^{\bfr,\bfr'}$ and $J_{z,z'}^{\bfr,\bfr'}\circ n_{z'}^{\bfr'}(x)= dx^{z,\bfr}/dx^{z',\bfr'} (x)= \det M_{z,x}^\bfr ( M_{z',x}^{\bfr'})^{-1}=\det( M_{z',x}^{\bfr'})^{-1}  M_{z,x}^\bfr$.
We deduce from Lemma \ref{Ssigma} $(vi)$ that if $(M,\exp)$ has a $S_{\sigma}$ (resp. $\O_M$) bounded geometry then $J_{z,z'}^{\bfr,\bfr'}$ is in $S^\times_{\sigma}(\R^n)$ (resp. $\O_M^\times (\R^n)$). 

\begin{defn} Any smooth function $f$ is in $S_\sigma$ (resp. $\O_M$) if for any frame $(z,\bfr)$, $f\circ (n_z^\bfr)^{-1}\in S_\sigma(\R^n)$ (resp. $\O_M(\R^n)$). Similarly, any smooth function $f$ is in $S_\sigma^\times$ (resp. $\O_M^\times$) if for any frame $(z,\bfr)$, $f\circ (n_z^\bfr)^{-1}\in S_\sigma^\times(\R^n)$ (resp. $\O_M^\times(\R^n)$).
\end{defn}

\begin{lem}
\label{redSOM}
If $(M,\exp)$ has a $S_\sigma$-bounded geometry then a smooth function $f$ on $M$ is in $S_\sigma$ (resp. $S_\sigma^\times$) if there exists a frame $(z,\bfr)$ such that $f\circ (n_z^\bfr)^{-1}\in S_\sigma(\R^n)$ (resp. $S_\sigma^\times(\R^n)$). Similarly, If $(M,\exp)$ has a $\O_M$-bounded geometry then $f$ is in $\O_M$ (resp. $\O_M^\times$) if there exists a frame $(z,\bfr)$ such that $f\circ (n_z^\bfr)^{-1}\in \O_M(\R^n)$ (resp. $\O_M^\times(\R^n)$). 
\end{lem}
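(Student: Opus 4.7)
The strategy is to translate the property at one frame to any other frame by composing with the normal coordinate change diffeomorphism $\psi_{z,z'}^{\bfr,\bfr'}$, and then to exploit the fact that under the $S_\sigma$ (resp. $\O_M$) bounded geometry hypothesis these transition maps lie in $G_\sigma^\times(\R^n)$ (resp. $\O_M^\times(\R^n,\R^n)$), so function classes like $S_\sigma(\R^n)$ and $\O_M(\R^n)$ are stable under such composition.

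Concretely, suppose $(M,\exp)$ has $S_\sigma$-bounded geometry and that there is a frame $(z,\bfr)$ with $f\circ (n_z^\bfr)^{-1}\in S_\sigma(\R^n)$. For an arbitrary frame $(z',\bfr')$, I would write the identity
\[
f\circ (n_{z'}^{\bfr'})^{-1} \;=\; \bigl(f\circ (n_z^\bfr)^{-1}\bigr)\circ \psi_{z,z'}^{\bfr,\bfr'}.
\]
By Lemma \ref{Ssigma} (iv) (or equivalently the transitivity argument in its proof), $S_\sigma$-bounded geometry implies $\psi_{z,z'}^{\bfr,\bfr'}\in G_\sigma^\times(\R^n)$, and hence $\psi_{z,z'}^{\bfr,\bfr'}\asymp \Id_{\R^n}$, so there exists $\eps>0$ with $\langle \psi_{z,z'}^{\bfr,\bfr'}(\rv)\rangle \geq \eps\langle \rv\rangle$ for all $\rv\in\R^n$. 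Applying the $S_\sigma$-composition statement of Lemma \ref{Ssigma} (i) to $f\circ (n_z^\bfr)^{-1}\in S_\sigma(\R^n)$ and $g=\psi_{z,z'}^{\bfr,\bfr'}\in G_\sigma(\R^n,\R^n)$ yields $f\circ (n_{z'}^{\bfr'})^{-1}\in S_\sigma(\R^n)$, which is what is required. The $S_\sigma^\times$ version follows at once by applying the same argument to $1/\bigl(f\circ (n_z^\bfr)^{-1}\bigr)\in S_\sigma(\R^n)$, since composition with $\psi_{z,z'}^{\bfr,\bfr'}$ commutes with pointwise inversion.

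The $\O_M$-case is handled in the same way, but is in fact simpler: under $\O_M$-bounded geometry, Lemma \ref{Ssigma} (iv) gives $\psi_{z,z'}^{\bfr,\bfr'}\in \O_M^\times(\R^n,\R^n)$, and the inclusion $\O_M(\R^n)\circ \O_M(\R^n,\R^n)\subseteq \O_M(\R^n)$ of Lemma \ref{Ssigma} (iii) immediately yields $f\circ (n_{z'}^{\bfr'})^{-1}\in \O_M(\R^n)$ without any lower-bound condition. For $\O_M^\times$, one again applies the same observation to $1/\bigl(f\circ (n_z^\bfr)^{-1}\bigr)$ and uses stability of $\O_M^\times(\R^n)$ under pointwise inversion.

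There is essentially no obstacle here: the content of the lemma is packaged in the composition and inversion stability of the classes $G_\sigma^\times$ and $\O_M^\times$ (Lemma \ref{Ssigma} (i)--(v)) together with the equivalent reformulation of bounded geometry (Lemma \ref{Ssigma} (iv)). The only mildly delicate point is checking the lower-bound hypothesis $\langle g(\rv)\rangle \geq \eps\langle \rv\rangle$ in the $S_\sigma$ step, which is precisely the $\psi_{z,z'}^{\bfr,\bfr'}\asymp \Id_{\R^n}$ estimate already recorded as part of the proof of Lemma \ref{Ssigma} (ii) (and also in Lemma \ref{B1-lem} (i)).
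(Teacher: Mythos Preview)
Your proof is correct and follows essentially the same approach as the paper: both arguments reduce to the Faa di Bruno estimate for the composition $f\circ (n_{z'}^{\bfr'})^{-1} = \bigl(f\circ (n_z^\bfr)^{-1}\bigr)\circ \psi_{z,z'}^{\bfr,\bfr'}$ under the bounded geometry hypothesis. The only cosmetic difference is that you invoke the abstract composition stability of Lemma~\ref{Ssigma}~(i),(iii), whereas the paper cites the explicit change-of-frame expansion of Lemma~\ref{B1-lem}~(ii); these two lemmas package the same Faa di Bruno computation.
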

\begin{proof} Let $(z',\bfr')$ be a frame such that $f\circ (n_{z'}^{\bfr'})^{-1}\in S_\sigma(\R^n)$, and let $(z,\bfr)$ be another frame. By Lemma \ref{B1-lem} $(ii)$, if $(M,\exp)$ has a $S_\sigma$-bounded geometry then for any $n$-multi-index $\a$, 
$$
\del^\a (f\circ (n_z^\bfr)^{-1}) = \sum_{0\leq |\a'|\leq |\a|} f_{\a,\a'}\circ (n_{z}^{\bfr})^{-1} \, (\del^{\a'} f\circ (n_{z'}^{\bfr'})^{-1}) \circ \psi_{z',z}^{\bfr',\bfr}
$$
where $(f_{\a,\a'}\circ (n_{z}^{\bfr})^{-1})(\rx) = \O(\langle \rx\rangle^{-\sigma(|\a|-|\a'|)})$. As a consequence $\del^\a (f\circ (n_z^\bfr)^{-1})(\rx)=\O(\langle \rx\rangle^{-\sigma|\a|})$ and the result follows. The case of $\O_M$ bounded geometry is similar.
\end{proof}

\begin{defn}
A smooth strictly positive density $d\mu$ is a $S_\sigma^\times$-density (resp. $\O_M^\times$-density) if for any frame $(z,\bfr)$, the unique smooth strictly positive function $f_{z,\bfr}$ such that $d\mu = f_{z,\bfr} |dx^{z,\bfr}|$ is in $S_\sigma^\times$ (resp. $\O_M^\times$). In this case, we shall note $\mu_{z,\bfr}$ the smooth stricly positive function in $S_\sigma^\times(\R^n)$ (resp. $\O_M^\times(\R^n)$) such that $d\mu =(\mu_{z,\bfr}\circ n_z^\bfr)\, |dx^{z,\bfr}|$. 
\end{defn}

We shall say that $(M,\exp,d\mu)$ has a $S_\sg$ (resp. $\O_M$) bounded geometry if $(M,\exp)$ has a $S_\sg$ (resp. $\O_M$) bounded geometry  and $d\mu$ is a $S^\times_\sg$(resp. $\O^\times_M$) density.

\begin{lem} If $(M,\exp)$ has a $S_\sigma$ (resp. $\O_M$) bounded geometry then any density of the form $ u\circ n_{z'}^{\bfr'} |dx^{z,\bfr}|$ where $u$ is a smooth strictly positive function in $S_\sigma^\times(\R^n)$ (resp. $\O_M(\R^n)$) and $(z,\bfr)$, $(z',\bfr')$ are frames, is a $S_\sigma^\times$-density (resp. $\O_M^\times$-density).
\end{lem}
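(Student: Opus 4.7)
The plan is to verify the density property frame by frame, reducing to $\mathbb{R}^n$ via the diffeomorphism $n_{z''}^{\bfr''}$ and then assembling the result from three already-proven facts: the group structure of $S_\sigma^\times(\R^n)$ (Lemma \ref{Ssigma}(v)), the Jacobian statement (Lemma \ref{Ssigma}(vi)), and the composition/reduction lemma (Lemma \ref{redSOM}). Fix an arbitrary frame $(z'',\bfr'')$; from the identity $|dx^{z,\bfr}|=(J_{z,z''}^{\bfr,\bfr''}\circ n_{z''}^{\bfr''})\,|dx^{z'',\bfr''}|$ recalled just before Lemma \ref{redSOM}, one reads off
\[
f_{z'',\bfr''}\circ (n_{z''}^{\bfr''})^{-1} \;=\; \bigl(u\circ \psi_{z',z''}^{\bfr',\bfr''}\bigr)\cdot J_{z,z''}^{\bfr,\bfr''}.
\]
By Lemma \ref{redSOM} it suffices to show that this function lies in $S_\sigma^\times(\R^n)$ (resp.\ $\O_M^\times(\R^n)$), and by Lemma \ref{Ssigma}(v) it is enough to show that each of the two factors lies there.

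For the Jacobian factor, Lemma \ref{Ssigma}(vi) gives immediately $J_{z,z''}^{\bfr,\bfr''}\in S_\sigma^\times(\R^n)$ (resp.\ $\O_M^\times(\R^n)$), since $\psi_{z,z''}^{\bfr,\bfr''}\in G_\sigma^\times(\R^n)$ (resp.\ $\O_M^\times(\R^n,\R^n)$) by the assumed $S_\sigma$-bounded (resp.\ $\O_M$-bounded) geometry and Lemma \ref{Ssigma}(iv). For the composition factor, apply Lemma \ref{Ssigma}(i) twice: once to $u\in S_\sigma(\R^n)$ and once to $1/u\in S_\sigma(\R^n)$, composed with $g:=\psi_{z',z''}^{\bfr',\bfr''}\in G_\sigma^\times(\R^n)$. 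The hypothesis $\langle g(\rv)\rangle\geq\eps\langle \rv\rangle$ required by Lemma \ref{Ssigma}(i) for $\sigma>0$ is precisely what was established inside the proof of Lemma \ref{Ssigma}(ii), using that $g^{-1}\in G_\sigma$ has bounded first derivatives and hence at most linear growth. This yields $u\circ g\in S_\sigma(\R^n)$ and $(1/u)\circ g = 1/(u\circ g)\in S_\sigma(\R^n)$, whence $u\circ g\in S_\sigma^\times(\R^n)$.

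Multiplying the two factors inside $S_\sigma^\times(\R^n)$ concludes the $S_\sigma$ case. The $\O_M$ case is entirely parallel: use Lemma \ref{Ssigma}(iii) in place of (i), the $\O_M$-clause of (vi), and the fact that $S_\sigma^\times\subseteq\O_M^\times$ of (v). The only place where any care is needed is to keep the distinction between $u\in S_\sigma^\times(\R^n)$ (a condition on $\R^n$) and $u\circ n_{z'}^{\bfr'}$ (the associated function on $M$); this is handled transparently by the reduction-to-one-frame statement of Lemma \ref{redSOM}. I do not foresee any genuine obstacle; the entire argument is a bookkeeping exercise built on the lemmas already proved.
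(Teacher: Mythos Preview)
Your proof is correct and follows essentially the same approach as the paper: write the density in an arbitrary frame $(z'',\bfr'')$, factor the coefficient as a product of $u\circ\psi_{z',z''}^{\bfr',\bfr''}$ and the Jacobian $|J_{z,z''}^{\bfr,\bfr''}|$, and conclude via Lemma~\ref{redSOM} and Lemma~\ref{Ssigma}(v),(vi). The paper's version is terser---it invokes Lemma~\ref{redSOM} directly on the two factors viewed as functions on $M$ rather than spelling out the composition argument via Lemma~\ref{Ssigma}(i)---but the content is the same; one cosmetic point is that the Jacobian should carry an absolute value (as in the paper), though this is harmless since $J_{z,z''}^{\bfr,\bfr''}$ has constant sign on $\R^n$.
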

\begin{proof} Let $(z'',\bfr'')$ be an arbitrary frame. Noting $d\mu:=u\circ n_{z'}^{\bfr'} |dx^{z,\bfr}|$, we get $d\mu = (u\circ n_{z'}^{\bfr'})| J_{z,z''}^{\bfr,\bfr''}|\circ n_{z''}^{\bfr''} |dx^{z'',\bfr''}|$.  We already saw that the function $J_{z,z''}^{\bfr,\bfr''}$ is in $S^\times_\sigma(\R^n)$ (resp. $\O_M^\times(\R^n))$. By Lemma \ref{redSOM}, $(u\circ n_{z'}^{\bfr'})( |J_{z,z''}^{\bfr,\bfr''}|\circ n_{z''}^{\bfr''})$  is in $S_\sigma^\times$ (resp. $\O_M^\times$).
\end{proof}

\begin{rem} By taking $u:=\rx\mapsto 1$ in the previous lemma, we see that for any exponential manifold $(M,\exp)$ with $S_\sigma$ (resp. $\O_M$) bounded geometry, we can define a canonical family of $S^\times_\sigma$-densities (resp. $\O_M^\times$-densities) on $M$: $\mathcal{D}:=(|dx^{z,\bfr}|)_{(z,\bfr)\in I}$ where $I$ is the set of frames on $M$. If the map $\exp$ is the exponential map associated to a pseudo-Riemannian metric $g$ on $M$, we can also define a canonical subfamily of $\mathcal{D}$ by $\mathcal{D}_g:=(|dx^{z}|)_{z\in M}$ where $|dx^z|:=|dx^{z,\bfr}|$ with $\bfr$ any orthonormal basis (in the sense $g_z(\bfr_i,\bfr_j)=\eta_i\delta_{i,j}$ where $\eta_i=1$ for $1\leq i \leq m$ and $\eta_i=-1$ for $i>m$, where $g$ has signature $(m,n-m)$) of $T_z(M)$ ($|dx^z|$ is then independent of $\bfr$). A priori, the Riemannian density does not belong to the canonical $M$-indexed family $\mathcal{D}_g$. 
\end{rem}

We shall need integrations over tangent and cotangent fibers and manifolds. We thus define $d\mu^*:=(\mu^{-1}_{z,\bfr}\circ n_z^\bfr)\,|\del_{z,\bfr}|$ the codensity associated to $d\mu$, where $\mu^{-1}_{z,\bfr}:=\tfrac{1}{\mu_{z,\bfr}}$ and 
$(z,\bfr)$ is a frame. Note that since $|\del_{z,\bfr}|/|\del_{z',\bfr'}|=|dx^{z',\bfr'}|/|dx^{z,\bfr}|= (\mu_{z,\bfr}\circ n_{z}^{\bfr})/(\mu_{z',\bfr'}\circ n_{z'}^{\bfr'})$, $d\mu^*$ is independent of $(z,\bfr)$.
For a given $x\in M$, the density on $T_x(M)$ associated to $d\mu$ is $d\mu_x:=(\mu_{z,\bfr}\circ n_z^\bfr(x))\, |dx^{z,\bfr}_x|$ and the associated density on $T_x^*(M)$ is $d\mu_x^*:=(\mu^{-1}_{z,\bfr}\circ n_z^\bfr(x))\,|{\del_{z,\bfr}}_x|$. For a function $f$ defined on $T_x(M)$ or $T^*_x(M)$, we have formally:
\begin{align*}
&\int_{T_x(M)} f(\xi)\, d\mu_x(\xi) = \mu_{z,\bfr}\circ n_z^\bfr(x)\int_{\R^n}f\circ (  M_{z,x}^{\bfr})^{-1}(\zeta)\, d\zeta \,, \\
&\int_{T_x^*(M)} f(\th)\, d\mu_x^*(\th) = \mu_{z,\bfr}^{-1}\circ n_z^\bfr(x)\int_{\R^n}f\circ (\wt M_{z,x}^{\bfr})^{-1}(\vth)\, d\vth\, ,
\end{align*}
and it is straightforward to check that these integrals are independent of the chosen frame $(z,\bfr)$.

\subsection{Schwartz spaces and operators}

\begin{assum} We suppose in this section and in section \ref{fouriersec} that $(M,\exp,E,d\mu)$, where $E$ is an hermitian vector bundle on $M$, has a $\O_M$-bounded geometry.
\end{assum}

The main consequence of the exponential structure is the possibility to define Schwartz functions on $M$.

\begin{defn} A section $u\in C^\infty(M,E)$ is rapidly decaying at infinity if for any $(z,\bfr)$, any $n$-multi-index $\a$ and $p\in \N$, there exists $K_{\a,p}>0$ such that the following estimate
\begin{equation}\label{SB-est}
\norm{\del_{z,\bfr}^{\a} u^z (x)}_{E_z}< K_{\a,p} \langle x\rangle_{z,\bfr}^{-p} 
\end{equation}
holds uniformly in $x\in M$. We note $\S(M,E)$ the space of such sections.
\end{defn}

With the hypothesis of $\O_M$-bounded geometry, we see that the requirement ``any $(z,\bfr)$'' can be reduced to a simple existence:

\begin{lem}
\label{indepzbS}
A section $u\in C^\infty(M,E)$ is in $\S(M,E)$ if and only if there exists a frame $(z,\bfr)$ such that (\ref{SB-est}) is valid. 
\end{lem}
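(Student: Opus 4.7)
The ``only if'' direction is immediate from the definition, so the real content is the converse: given a frame $(z,\bfr)$ for which the estimates \eqref{SB-est} hold, I must recover the analogous estimates for an arbitrary frame $(z',\bfr')$. The plan is to reduce all quantities controlled by $(z',\bfr')$ back to quantities controlled by $(z,\bfr)$, exploiting the polynomial bounds provided by $\O_M$-bounded geometry, and then absorbing polynomial growth factors into the free integer $p$ of the Schwartz estimate.

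The first step is to write $u^{z'} = (\tau_{z'}^{-1}\tau_{z})\, u^{z}$, viewed as a pointwise product of a smooth $L(E_z,E_{z'})$-valued function on $M$ with the $E_z$-valued section $u^z$. Applying Leibniz for the derivations $\del_{z',\bfr'}^{\a}$ and then using Lemma~\ref{B1-lem}~(ii) to re-express each $\del_{z',\bfr'}^{\b}$ as a finite sum $\sum_{|\b''|\leq |\b|} f_{\b,\b''}\, \del_{z,\bfr}^{\b''}$, I obtain a finite expansion
\[
\del_{z',\bfr'}^{\a} u^{z'}(x) \;=\; \sum_{|\b|,|\ga|\leq |\a|} g_{\a,\b,\ga}(x)\; \del_{z,\bfr}^{\b}(\tau_{z'}^{-1}\tau_{z})(x)\; \del_{z,\bfr}^{\ga} u^{z}(x),
\]
where each $g_{\a,\b,\ga}(x)$ is polynomially bounded in $\langle x\rangle_{z,\bfr}$ by $(\O_M 1)$ and Lemma~\ref{B1-lem}~(ii).

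The second step is to estimate each factor. By hypothesis $(\O_M 2)$, $\del_{z,\bfr}^{\b}(\tau_{z'}^{-1}\tau_{z})(x) = \O(\langle x\rangle_{z,\bfr}^{p_{\b}})$ for some $p_{\b}\geq 1$. The hypothesis on $u$ gives, for every $N\in\N$, a constant $K_{\ga,N}$ with $\norm{\del_{z,\bfr}^{\ga}u^{z}(x)} \leq K_{\ga,N}\langle x\rangle_{z,\bfr}^{-N}$. Since the isometries $\tau_{z'}^{-1}\tau_z$ preserve the fiber norm up to a bounded endomorphism factor already absorbed in the first estimate, combining the three bounds yields
\[
\norm{\del_{z',\bfr'}^{\a} u^{z'}(x)}_{E_{z'}} \;\leq\; C_{\a,N}\, \langle x\rangle_{z,\bfr}^{\,Q_{\a} - N}
\]
for some $Q_{\a}\geq 0$ depending only on $\a$ and the frame data, and every $N\in\N$.

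The last step converts this into an estimate in the $(z',\bfr')$ frame. By Lemma~\ref{B1-lem}~(i), the $\O_M$-bounded geometry provides constants $K,q>0$ such that $\langle x\rangle_{z',\bfr'}\leq K\langle x\rangle_{z,\bfr}^{q}$, equivalently $\langle x\rangle_{z,\bfr}^{-1} \leq K^{1/q}\langle x\rangle_{z',\bfr'}^{-1/q}$. Hence, given any target $p\in\N$, I choose $N$ so large that $(N-Q_{\a})/q \geq p$; this yields $\norm{\del_{z',\bfr'}^{\a} u^{z'}(x)}_{E_{z'}} \leq K'_{\a,p}\, \langle x\rangle_{z',\bfr'}^{-p}$, which is \eqref{SB-est} in the new frame. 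The mild obstacle is purely bookkeeping: one must ensure that the polynomial factors $g_{\a,\b,\ga}$ and $\del_{z,\bfr}^\b(\tau_{z'}^{-1}\tau_z)$ accumulate only finitely many positive powers of $\langle x\rangle_{z,\bfr}$, which is guaranteed because both expansions are finite sums with $\O_M$ coefficients, and the Schwartz hypothesis supplies arbitrarily fast decay to compensate.
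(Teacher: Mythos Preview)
Your proof is correct and follows essentially the same route as the paper: write $u$ in the new frame via the transition function $\tau_{z'}^{-1}\tau_z$, expand with Leibniz and Lemma~\ref{B1-lem}~(ii), bound the coordinate-change and parallel-transport factors polynomially via $(\O_M1)$ and $(\O_M2)$, and absorb the resulting powers into the arbitrary decay exponent using \eqref{i-first-om}. The only cosmetic difference is that the paper applies the coordinate change of derivations before Leibniz (and keeps the $\tau$-derivatives in the $(z',\bfr')$ frame), whereas you apply Leibniz first and then convert both factors back to $(z,\bfr)$; either order works.
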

\begin{proof} Suppose that (\ref{SB-est}) is valid for $(z',\bfr')$ and let $(z,\bfr)$ be another frame.  Thus, with Lemma \ref{B1-lem} $(ii)$ and  Leibniz rule,
\begin{align}\label{del-coord-ch}
\del_{z,\bfr}^\a u^z(x)  =  \sum_{0\leq |\a'|\leq |\a| }\, \sum_{\b\leq \a' }\ \ f_{\a,\a'} \,\tbinom{\a'}{\b}\,\del_{z',\bfr'}^{\a'-\b} (\tau_z^{-1}\tau_{z'})\ \del_{z',\bfr'}^{\b} u^{z'}(x).
\end{align}
Moreover $|f_{\a,\a'} \,\tbinom{\a'}{\b}\,\del_{z',\bfr'}^{\a'-\b} (\tau_z^{-1}\tau_{z'})| \leq C_{\a} \langle x \rangle_{z,\bfr}^{q_\a}$ for a $C_\a>0$ and a $q_\a\geq 1$. Now (\ref{SB-est}) and (\ref{i-first-om}) entail that for any $p\in \N$, there is $K>0$ such that
$\norm{\del_{z,\bfr}^\a u^z(x)}_{E_z} \leq K \langle x \rangle_{z,\bfr}^{-p}$. The result follows.
\end{proof}

\begin{rem} Let $u\in C^\infty(M,E)$ and $(z,\bfr)$ a frame. Then $u\in \S(M,E)$ if and only if $(\tau_z^{-1} u)\circ (n_{z}^{\bfr})^{-1} \in \S(\R^n,E_z)$. In other words, if $v\in \S(\R^n,E_z)$ then $\tau_z(v\circ n_z^\bfr) \in \S(M,E)$. 
\end{rem}

The following lemma shows that we can define canonical Fr\'{e}chet topologies on $\S(M,E)$.

\begin{lem} Let $(z,\bfr)$ a frame. Then 

\noindent (i) The following set of semi-norms: 
$$
q_{\a,p}(u):= \sup_{x\in M} \langle x \rangle_{z,\bfr}^{p} \norm{\del_{z,\bfr}^{\a} u^z (x)}_{E_z}\,.
$$
defines a locally convex metrizable topology $\mathcal{T}$ on $\S(M,E)$.

\noindent (ii) The application $T_{z,\bfr}$ is a topological isomorphism from the space $\S(M,E)$ onto $\S(\R^n,E_z)$.

\noindent (iii) The topology $\mathcal{T}$ is Fr\'{e}chet and independent of the chosen frame $(z,\bfr)$.
\end{lem}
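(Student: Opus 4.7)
The plan is to derive (i) and (ii) in one stroke by recognizing that $T_{z,\bfr}$ identifies the proposed seminorm system on $\S(M,E)$ with the standard Schwartz seminorm system on $\S(\R^n,E_z)$, and then to deduce (iii) by transporting the classical Fr\'echet structure of the latter and invoking the coordinate-change estimates already set up in Lemmas \ref{indepzbS} and \ref{B1-lem}.

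For $(i)$, each $q_{\a,p}$ is finite on $\S(M,E)$ by the very defining estimate \eqref{SB-est}; homogeneity and the triangle inequality are immediate, and $q_{0,0}(u)=0$ forces $u^z=0$ pointwise, hence $u=0$, so the family separates points. Since the index set $\N^n\times\N$ is countable, the generated locally convex topology $\mathcal T$ is metrizable. For $(ii)$, the construction of $\del_{i,z,\bfr}$ as the coordinate vector field of the chart $n_z^\bfr$ yields the chain-rule identity $(\del_{z,\bfr}^\a u^z)\circ (n_z^\bfr)^{-1} = \del^\a(T_{z,\bfr}u)$, and $\langle x\rangle_{z,\bfr} = \langle n_z^\bfr(x)\rangle$ by construction. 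Making the substitution $x=(n_z^\bfr)^{-1}(\rx)$ inside the supremum then gives
$$
q_{\a,p}(u) = \sup_{\rx \in \R^n}\langle \rx \rangle^p \norm{\del^\a(T_{z,\bfr}u)(\rx)}_{E_z},
$$
which is exactly the standard Schwartz seminorm of $T_{z,\bfr}u$ on $\R^n$. Hence $T_{z,\bfr}$ is a seminorm-preserving linear bijection; its inverse is the explicit map $v\mapsto \tau_z(v\circ n_z^\bfr)$, which lands in $\S(M,E)$ by Lemma \ref{indepzbS} applied to the frame $(z,\bfr)$. This makes $T_{z,\bfr}$ a topological isomorphism of locally convex spaces.

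For $(iii)$, Fr\'echetness of $\mathcal T$ transfers from the classical Fr\'echet structure of $\S(\R^n,E_z)$ through the isomorphism established in $(ii)$. Frame independence is the only place where $\O_M$-bounded geometry is genuinely used: given another frame $(z',\bfr')$ with seminorms $q'_{\a',p'}$, I would plug the expansion \eqref{del-coord-ch} into $q_{\a,p}(u)$, apply the Leibniz rule to the factor $\del_{z',\bfr'}^{\a'-\b}(\tau_z^{-1}\tau_{z'})$, and then dominate each coefficient using the polynomial estimate $|f_{\a,\a'}(x)|\leq C\langle x\rangle_{z,\bfr}^{q_\a}$ from Lemma \ref{B1-lem}$(ii)$, the $(\O_M 2)$ bound on $\del_{z',\bfr'}^{\a'-\b}(\tau_z^{-1}\tau_{z'})$, and the comparison $\langle x\rangle_{z,\bfr} \leq K\langle x\rangle_{z',\bfr'}^q$ from \eqref{i-first-om}. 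A finite linear combination of the seminorms $q'_{\a',p'}(u)$ with $p'$ of the form $pq + q_\a + p_{\a'-\b}$ then majorizes $q_{\a,p}(u)$, which gives continuity of $\mathrm{Id}:(\S(M,E),\mathcal T)\to(\S(M,E),\mathcal T')$; exchanging the roles of the two frames yields continuity in the opposite direction, and the two topologies coincide.

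The only real obstacle is bookkeeping: tracking that every polynomial weight produced by the $\O_M$-bounded geometry is absorbed inside a sufficiently large Schwartz decay exponent. No new analytic input beyond Lemmas \ref{indepzbS} and \ref{B1-lem} is required, and once those are invoked with the correct exponents the three claims collapse into the short arguments above.
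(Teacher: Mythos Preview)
Your proof is correct and follows essentially the same approach as the paper: parts (i) and (ii) are treated as immediate consequences of the fact that $T_{z,\bfr}$ intertwines the seminorms $q_{\a,p}$ with the standard Schwartz seminorms on $\R^n$, while (iii) transports completeness through this isomorphism and obtains frame independence by feeding the expansion \eqref{del-coord-ch} and the $\O_M$ bounds of Lemma~\ref{B1-lem} into a seminorm estimate of the form $q_{\a,p}^{(z,\bfr)}(u)\leq C_{\a,p}\sum_{|\b|\leq|\a|}q^{(z',\bfr')}_{\b,r_{\a,p}}(u)$. Your write-up is simply more explicit than the paper's, which dismisses (i)--(ii) as obvious and refers back to the proof of Lemma~\ref{indepzbS} for (iii).
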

\begin{proof}
$(i)$ and $(ii)$ are obvious.

\noindent $(iii)$ Since $T_{z,\bfr}$ is a topological isomorphism, $\mathcal{T}$ is complete. Following the arguments of the proof of Lemma \ref{indepzbS}, we see that there is $r\in \N^*$ such that for any $n$-multi-index $\a$ and $p\in \N$, there exist $C_{\a,p}>0$, $r_{\a,p}\in \N^*$, such for any $u\in \S(M,E)$,
$$
q_{\a,p}^{(z,\bfr)} (u) \leq C_{\a,p} \sum_{|\b|\leq |\a|} q^{(z',\bfr')}_{\b,r_{\a,p}}(u)\, .
$$
The independence on $(z,\bfr)$ follows.
\end{proof}

\begin{rem}
\label{BME}
If $(M,\exp,E,d\mu)$ has a $S_0$-bounded geometry, then it is possible to define the Fr\'{e}chet space of smooth sections with bounded derivatives $\B(M,E)$ by following the same procedure of $\S(M,E)$, with Lemma \ref{B1-lem}.
\end{rem}

Classical results of distribution theory \cite{Treves} and the previous topological isomorphisms $T_{z,\bfr}$ entail the following diagrams of continuous linear injections ($(M;E)$ ommitted and $1\leq p<\infty$):

\[\xymatrix{
    C_c^\infty \ar[r]\ar[d] & \S \ar[r]\ar[d]  & C^\infty \ar[d]\\
    \E' \ar[r] &  \S' \ar[r] & \D'\\
  } \hskip1cm
\xymatrix{
     & \B \ar[rr]    &&  L^\infty \ar[rd] & \\
    \S \ar[rr]\ar[ru] & & L^p(d\mu) \ar[rr] && \S' 
  \ .} \] 
  
The injections $\S\to \B\to L^\infty$ are valid in the case where $M$ has a $S_0$-bounded geometry. In the case of a general $\O_M$-bounded geometry, only the injection $\S\to L^\infty$ holds a priori.  
The injection from functions into distribution spaces is given here by $u\mapsto \langle u,\cdot \rangle$ where $\langle u,v\rangle :=\int (u | v)\, d\mu$. Note that the following continuous injections $\S \to \S' $ and $ \S\to L^{p}(d\mu) \to \S'$, $(1\leq p<\infty)$ have a dense image.

Using the same principles of the definition of $\S$ together with the $\O_M$-bounded geometry hypothesis and Lemma \ref{cchange} $(ii)$, we define the Fr\'{e}chet space $\S(M\times M, L(E))$ such that for any $(z,\bfr)$ the applications $T_{z,\bfr,M^2}:=K\mapsto  K^z \circ (n_{z,M^2}^\bfr)^{-1}$
are topological isomorphisms from $\S(M\times M , L(E))$ onto $\S(\R^{2n},L(E_z))$. Noting $j_{M^2}$ the continous dense injection from $\S(M\times M,L(E))$ into its antidual $\S'(M\times M,L(E))$ defined as $\langle j_{M^2}(K),K'\rangle= \int_{M\times M} \Tr(K(x,y) (K'(x,y))^*)\,d\mu\ox d\mu(x,y)$, we have the following commutative diagram, where $j$ is the classical continuous dense inclusion from $\S(\R^{2n},L(E_z))$ into its antidual, and $M_{\mu\otimes\mu}$ is the multiplication operator from $\S(\R^{2n},L(E_z))$ onto itself by the $\O_M^\times(\R^{2n})$ function $\mu_{z,\bfr}\otimes \mu_{z,\bfr}$:

\[\xymatrix{
    \S(M\times M,L(E)) \ar[rr]^{j_{M^2}} \ar[d]_{T_{z,\bfr,M^2}} &  \ar[r] &\S'(M\times M,L(E))  \\
    \S(\R^{2n},L(E_z)) \ar[r]_{M_{\mu\ox\mu}} &  \S(\R^{2n},L(E_z)) \ar[r]_j & \S'(\R^{2n},L(E_z))\ . \ar[u]^{T^*_{z,\bfr,M^2}}   \\
  }\]

\noindent Since $\S$ is nuclear, $L(\S,\S')\simeq \S'(M\times M,L(E))$ and $\S(M\times M, L(E)) \simeq \S\ \wox\  \ol\S$ where $\ol \S :=\S(M,\ol E)$.  
Thus, $\S'(M\times M,L(E))\simeq \S'\wox\ \ol\S'$, where $\ol \S'$ is the dual of $\S$ which is also the antidual of $\ol \S$. Note that the isomorphism $L(\S,\S')\simeq \S'(M\times M,L(E))$ is given by 
$$
\langle A_K (v) ,u\rangle = K(u\otimes \ol v) 
$$
where $A_K$ is operator associated to the kernel $K$, $u,v\in \S$, and $\ol v(y):= \ol{v(y)}$. Formally, 
$$
\langle A_K (v), u\rangle= \int_{M\times M}(K(x,y) v(y) | u(x)) d\mu\otimes d\mu (x,y)\, ,  \qquad (A_K v)(x)= \int_M K(x,y) v(y) d\mu(y).
$$  
Thus any continuous linear operator $A:\S \to \S'$ is uniquely determined by its kernel $K_A \in  \S'(M\times M , L(E))$.
The transfert of $A$ into the frame $(z,\bfr)$ is the operator $A_{z,\bfr}$ from $\S(\R^n,E_z)$ into $\S'(\R^{n},E_z)$ such that 
$$
\langle A_{z,\bfr}(v),u\rangle := \langle A(T_{z,\bfr}^{-1}(v)), T_{z,\bfr}^{-1}(u) \rangle.
$$
Thus, if $K_A$ is the kernel of $A$, we have $K_{A_{z,\bfr}}:= \wt T_{z,\bfr,M^2}(K_A)$ as the kernel of $A_{z,\bfr}$, where $\wt T_{z,\bfr,M^2}$ here is the inverse of the adjoint of $T_{z,\bfr,M^2}$. $\wt T_{z,\bfr,M^2}$ is a topological isomorphism from $\S'(M\times M , L(E))$ onto  $\S'(\R^{2n} , L(E_z))$.

\begin{defn} An operator $A\in L(\S,\S')$ is regular if $A$ and its adjoint $A^\dag$ send continously $\S$ into itself. An isotropic smoothing operator is an operator with kernel in $\S(M\times M,L(E))$. The space regular operators and the space of isotropic smoothing operators are respectively noted $\Re(\S)$ and $\Psi^{-\infty}$.
\end{defn}

Note that this definition of isotropic smoothing operators differs from the standard smoothing operators one where only local effects are taken into account, since in this case, a smoothing operator is just an operator with smooth kernel.  Clearly, $A$ is regular if and only if for any frame $(z,\bfr)$, $A_{z,\bfr}$ is regular as an operator from $\S(\R^{n},E_z)$ into $\S'(\R^{n},E_z)$. Remark that the space of regular operators  forms a $*$-algebra under composition and the space of isotropic smoothing operators $\Psi^{-\infty}$ is a $*$-ideal of this algebra.

Let us record the following important fact:

\begin{prop} Any isotropic smoothing operator extends (uniquely) as a Hilbert--Schmidt operator on $L^2(d\mu)$.
\end{prop}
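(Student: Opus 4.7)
The plan is to reduce the statement to the classical fact that an integral operator on an $L^2$-space whose kernel lies in $L^2$ of the product space is Hilbert--Schmidt, with HS-norm equal to the $L^2$-norm of the kernel. Concretely, if $A$ is an isotropic smoothing operator with kernel $K\in\S(M\times M,L(E))$, I want to show that $K$, viewed fiberwise as a section $(x,y)\mapsto K(x,y)\in L(E_y,E_x)$, lies in $L^2(M\times M,L(E),d\mu\otimes d\mu)$ with respect to the pointwise Hilbert--Schmidt norm on $L(E_y,E_x)$ (all norms being equivalent since the fibers are finite-dimensional).

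First I would fix a frame $(z,\bfr)$ and transport everything to $\R^{2n}$. The parallel transport isometries $\tau_z(x)\colon E_z\to E_x$ and $\tau_z(y)\colon E_z\to E_y$ give
\[
\|K(x,y)\|_{HS(E_y,E_x)}=\|\tau_z^{-1}(x)K(x,y)\tau_z(y)\|_{HS(E_z)}=\|K^z(x,y)\|_{HS(E_z)},
\]
so the finiteness of $\int_{M\times M}\|K\|^2\,d\mu\otimes d\mu$ is equivalent, after pulling back along $n^{\bfr}_{z,M^2}$, to
\[
\int_{\R^{2n}}\|T_{z,\bfr,M^2}(K)(\rx,\ry)\|_{HS(E_z)}^{2}\,\mu_{z,\bfr}(\rx)\mu_{z,\bfr}(\ry)\,d\rx\,d\ry<\infty.
\]
By definition of $\S(M\times M,L(E))$, the transferred kernel $T_{z,\bfr,M^2}(K)$ belongs to $\S(\R^{2n},L(E_z))$, hence decays faster than any polynomial. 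Since $(M,\exp,d\mu)$ has $\O_M$-bounded geometry, $\mu_{z,\bfr}\in\O_M^\times(\R^n)$ is polynomially bounded, so the product $\mu_{z,\bfr}\otimes\mu_{z,\bfr}$ is tempered on $\R^{2n}$ and the above integral converges. This step is the only substantive point; no real obstacle arises because Schwartz decay trivially absorbs any polynomial weight.

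Having established $K\in L^2(M\times M,L(E),d\mu\otimes d\mu)$, I would invoke the classical Hilbert--Schmidt theorem on the Hilbert space $L^2(M;E,d\mu)$: the prescription
\[
(A_K u)(x)=\int_{M}K(x,y)u(y)\,d\mu(y),\qquad u\in L^2(M;E,d\mu),
\]
defines a Hilbert--Schmidt operator with $\|A_K\|_{HS}^{2}=\int_{M\times M}\|K(x,y)\|_{HS}^{2}\,d\mu(x)d\mu(y)$. On the dense subspace $\S(M;E)\subset L^2(M;E,d\mu)$ this operator agrees with the original $A$, because the defining pairing $\langle A_K v,u\rangle=K(u\otimes\ol v)$ reduces, for Schwartz $u,v$, to the above Bochner integral (the integrand is integrable by Fubini and the Schwartz decay of $K$).

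Finally, uniqueness of the extension follows from the density of $\S(M;E)$ in $L^2(M;E,d\mu)$: two bounded operators on $L^2$ that coincide on a dense subspace are equal. Thus $A$ admits a unique extension to a bounded operator on $L^2$, and that extension is Hilbert--Schmidt.
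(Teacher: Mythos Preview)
Your proof is correct and follows essentially the same approach as the paper: transfer to a frame $(z,\bfr)$, observe that the Schwartz kernel $T_{z,\bfr,M^2}(K)\in\S(\R^{2n},L(E_z))$ lies in the weighted space $L^2(\R^{2n},(\mu_{z,\bfr}\,dx)^{\otimes 2})$ because the polynomial growth of $\mu_{z,\bfr}\in\O_M^\times(\R^n)$ is absorbed by the Schwartz decay, and conclude via the classical characterization of Hilbert--Schmidt operators by square-integrable kernels. The paper additionally remarks at the outset that such an operator maps $\S'$ to $\S$ and is therefore automatically bounded on $L^2$, but this is subsumed by your direct Hilbert--Schmidt argument.
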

\begin{proof} An isotropic smoothing operator $A$ (with kernel $K$) extends as a continous linear operator from $\S'$ to $\S$, and thus it also extends as a bounded operator on $L^2(d\mu)$. Let $(z,\bfr)$ be a frame. If $U$ is the unitary associated to the isomorphism $L^2(d\mu)$ onto $\H_{z,\bfr}:=L^2(\R^n,E_z,\mu_{z,\bfr}\, dx)$ we have $A = U^* A_{z,\bfr} U$ where $A_{z,\bfr}$ is a bounded operator on $\H_{z,\bfr}$ given by the kernel $K^z\circ (n_z^\bfr,n_z^\bfr)^{-1}$. Since this kernel is in $\H_{z,\bfr}\otimes \ol \H_{z,\bfr} = L^2(\R^{2n},E_z\otimes \ol E_z, (\mu_{z,\bfr}\, dx)^{\ox 2} )$, it follows that $A_{z,\bfr}$ is Hilbert--Schmidt on $\H_{z,\bfr}$, which gives the result. 
\end{proof}

\subsection{Fourier transform}\label{fouriersec}

Fourier transform is the fundamental element that will allow the passage from operators to their symbols. 
In our setting, it is natural to extend the classical Fourier transform on $\R^n$ to Schwartz spaces of rapidly decreasing sections 
on the tangent and cotangent bundles of $M$, and use the fibers $T_x(M)$, $T_x^*(M)$ as support of integration. 

\begin{defn} A smooth section $a \in C^\infty(T^*M,L(E))$ is in $\S(T^*M,L(E))$ if for any $(z,\bfr)$, any $2n$-multi-index $\nu$ and any $p\in \N$, there exists $K_{p,\nu}>0$ such that 
\begin{equation}
\label{STM-est}
\norm{\del_{z,\bfr}^{\nu} a^z(x,\th)}_{L(E_z)}\leq K_{p,\nu} \langle x,\th \rangle_{z,\bfr}^{-p}
\end{equation}
uniformly in $(x,\th)\in T^*M$. A similar definition is set for $\S(TM,L(E))$.
\end{defn}

Following the same technique as for the space $\S(M,E)$, using the coordinate invariance given by Lemma \ref{cchange} we obtain the
\begin{prop} 
(i)  A section $u\in C^\infty(T^*M,L(E))$ is in $\S(T^*M,L(E))$ if and only if there exists a frame $(z,\bfr)$ such that (\ref{STM-est}) is valid. A similar property holds for $\S(TM,L(E))$.

\noindent (ii) There is a Fr\'{e}chet topology on $\S(T^*M,L(E))$ such that each 
$$
T_{z,\bfr,*}: a \mapsto a^z \circ (n_{z,*}^{\bfr})^{-1}
$$
is a topological isomorphism from $\S(T^*M,L(E))$ onto $\S(\R^{2n},L(E_z))$. A similar property holds for $\S(TM,L(E))$ and the applications $T_{z,\bfr,T}:=a\mapsto a^z \circ (n_{z,T}^\bfr)^{-1}$.
\end{prop}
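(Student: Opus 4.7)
The proof mirrors the strategy already used for $\S(M,E)$ in Lemma \ref{indepzbS} and the subsequent topology lemma, with the extra covariable $\th$ handled by Lemma \ref{cchange} in place of Lemma \ref{B1-lem} (ii). I will spell out the argument for $\S(T^*M,L(E))$; the case of $\S(TM,L(E))$ is identical after substituting $n_{z,T}^\bfr$ for $n_{z,*}^\bfr$ and using the tangent-vector estimate in (\ref{i-bis-om}) instead of the cotangent one.

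For part (i), suppose (\ref{STM-est}) holds in some frame $(z',\bfr')$ and let $(z,\bfr)$ be arbitrary. Since only the base point enters the conjugating factors in $a^z=(\tau_z^{-1}\tau_{z'})\circ\pi_1\cdot a^{z'}\cdot(\tau_{z'}^{-1}\tau_z)\circ\pi_1$, Leibniz's rule gives
$$\del_{z',\bfr'}^{(\a',\b')}a^z=\sum_{\ga_1+\ga_2+\ga_3=\a'}\tbinom{\a'}{\ga_1,\ga_2,\ga_3}\,\del_{z',\bfr'}^{\ga_1}(\tau_z^{-1}\tau_{z'})\cdot\del_{z',\bfr'}^{(\ga_2,\b')}a^{z'}\cdot\del_{z',\bfr'}^{\ga_3}(\tau_{z'}^{-1}\tau_z),$$
where hypothesis $(\O_M 2)$ controls the transport factors by $\langle x\rangle_{z',\bfr'}^{p_{\ga_i}}$. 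Then Lemma \ref{cchange} (i) in its $\O_M$ form (\ref{fabab2}) yields
$$\del_{z,\bfr}^{(\a,\b)}a^z=\underset{|\b'|\ge|\b|}{\sum_{|(\a',\b')|\le|(\a,\b)|}}f_{\a,\b,\a',\b'}\,\del_{z',\bfr'}^{(\a',\b')}a^z,$$
with $|f_{\a,\b,\a',\b'}(x,\th)|\le C_{\a,\b}\langle x\rangle_{z,\bfr}^{q_{\a,\b}}\langle\th\rangle_{z,\bfr,x}^{|\b'|-|\b|}$. Combining both expansions bounds $\norm{\del_{z,\bfr}^{(\a,\b)}a^z(x,\th)}$ by a finite sum of products of the form $\langle x\rangle_{z,\bfr}^{q}\langle\th\rangle_{z,\bfr,x}^{s}\langle x\rangle_{z',\bfr'}^{r}\langle x,\th\rangle_{z',\bfr'}^{-p_0}$ for any prescribed $p_0$ (paid for by (\ref{STM-est}) at $(z',\bfr')$). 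Finally Lemma \ref{B1-lem} (i), specifically the estimates (\ref{i-first-om})–(\ref{i-bis-om}), replaces $\langle x,\th\rangle_{z',\bfr'}$ and $\langle\th\rangle_{z',\bfr',x}$ by $\langle x,\th\rangle_{z,\bfr}$ up to a polynomial factor $\langle x\rangle_{z,\bfr}^{r'}$. Choosing $p_0$ large enough absorbs every polynomial factor into $\langle x,\th\rangle_{z,\bfr}^{-p}$, which is (\ref{STM-est}) in the new frame.

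For part (ii), I define the semi-norms
$$q_{\nu,p}^{(z,\bfr)}(a):=\sup_{(x,\th)\in T^*M}\langle x,\th\rangle_{z,\bfr}^{p}\,\norm{\del_{z,\bfr}^\nu a^z(x,\th)}_{L(E_z)},$$
which correspond exactly to the standard Schwartz semi-norms on $\S(\R^{2n},L(E_z))$ under the pull-back by $T_{z,\bfr,*}$. Since $T_{z,\bfr,*}$ is bijective by construction and each side carries a complete metrizable topology matched semi-norm by semi-norm, $T_{z,\bfr,*}$ becomes a topological isomorphism and $\S(T^*M,L(E))$ is Fr\'echet. The very same inequalities established in part (i) translate, via the sup over $(x,\th)$, into estimates of the form $q_{\nu,p}^{(z,\bfr)}(a)\le C_{\nu,p}\sum_{(\nu',p')\in F_{\nu,p}}q_{\nu',p'}^{(z',\bfr')}(a)$ for some finite index set $F_{\nu,p}$, so the Fr\'echet topology is independent of the frame.

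The main obstacle is the careful bookkeeping in part (i): polynomial growth enters from three distinct sources—the coordinate-change coefficients $f_{\a,\b,\a',\b'}$, the transport derivatives $\del_{z',\bfr'}^{\ga_i}(\tau_z^{-1}\tau_{z'})$, and the comparison of the weights $\langle\cdot\rangle_{z,\bfr}$ versus $\langle\cdot\rangle_{z',\bfr'}$—and each is merely polynomial in $\langle x\rangle$ thanks to the $\O_M$ hypotheses $(\O_M 1)$ and $(\O_M 2)$. One must verify that all the exponents depend only on $(\a,\b)$ and not on the point $(x,\th)$, so that a single choice of $p_0$ in the $(z',\bfr')$-decay estimate suffices to dominate every term uniformly.
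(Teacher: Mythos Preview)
Your proposal is correct and follows essentially the same approach as the paper: combine Lemma \ref{cchange} (i) in its $\O_M$ form with the Leibniz expansion of $a^z=(\tau_z^{-1}\tau_{z'})\,a^{z'}\,(\tau_{z'}^{-1}\tau_z)$, control the transport factors via $(\O_M 2)$, and convert weights between frames using (\ref{i-first-om})--(\ref{i-bis-om}) together with the elementary comparison $\langle\rx\rangle^{1/2}\langle\vth\rangle^{1/2}\le\langle(\rx,\vth)\rangle\le\langle\rx\rangle\langle\vth\rangle$. The paper records the resulting identity slightly more compactly as a single sum (its equation (\ref{del-cchange})) and passes directly to the seminorm inequality $q_{\nu,p}^{(z,\bfr)}(a)\le C_{\nu,p}\sum_{|\rho|\le|\nu|}q_{\rho,r_{\nu,p}}^{(z',\bfr')}(a)$, but the content is the same.
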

\begin{proof}
\noindent $(i,ii)$ 
Suppose that (\ref{STM-est}) is valid for $(z',\bfr')$ and $a\in C^\infty(T^*M,L(E))$ and let $(z,\bfr)$ another frame.  With Lemma \ref{cchange} and  Leibniz rule, noting $\nu=(\a,\b)$, $\nu'=(\a',\b')$, $\la=(\la^1,\la^2)$ and $\rho=(\rho^1,\rho^2)$, we get
\begin{align}\label{del-cchange}
\del_{z,\bfr}^{\nu} a^z  =  \underset{|\b'|\geq |\b| }{\sum_{0\leq |\nu'|\leq |\nu|}} \sum_{\rho\leq \la \leq \nu'}\ \ f_{\nu,\nu'} C_{\nu',\la,\rho}\,\del_{z',\bfr'}^{\a'-\la^1}(\tau_z^{-1}\tau_{z'})\, \del_{z',\bfr'}^{(\rho^1,\b')}(a^{z'}) \,\del_{z',\bfr'}^{\la^1-\rho^1}(\tau_{z'}^{-1}\tau_{z})\,
\end{align}
where $C_{\nu',\la,\rho}= \delta_{\b',\la^2}\delta_{\b',\rho^2} \tbinom{\nu'}{\la} \tbinom{\la}{\rho}$. Using now the fact that for any $\rx,\vth \in \R^n$,
$\langle \rx \rangle^{1/2} \langle \vth \rangle^{1/2} \leq \langle(\rx,\vth)\rangle \leq \langle  \rx \rangle \langle \vth \rangle$, and (\ref{i-first-om}), (\ref{i-bis-om}), we see that for any $2n$-multi-index $\nu$, and $p\in \N$, there is $r_{\nu,p}\in \N^*$ and $C_{\nu,p}>0$ such that $q_{\nu,p}^{(z,\bfr)}(a) \leq C_{\nu, p} \sum_{|\rho|\leq |\nu|} q^{(z',\bfr')}_{\rho,r_{\nu,p}}(a)$, where 
$$
q_{\nu,p}^{(z,\bfr)}(a):=  \sup_{(x,\th)\in T^*M} \langle x,\th \rangle_{z,\bfr}^{p} \norm{\del_{z,\bfr}^{\nu} a^z (x,\th)}_{L(E_z)}\, .
$$
The results follow, as in the case of $\S(M,E)$, by taking the topology given by the seminorms $q_{\nu,p}^{z,\bfr}$ for an arbitrary frame $(z,\bfr)$.
\end{proof}

\begin{rem} If $(M,\exp,E)$ has a $S_0$-bounded geometry, we saw in Remark \ref{BME} that a coordinate free (independent of the frame $(z,\bfr)$) definition of a space of smooth $E$-sections on $M$ with bounded derivatives is possible. However, a similar definition cannot be given in the same manner for $L(E)$-sections on $TM$ or $T^*M$ with bounded derivatives, due to the fact that the change of coordinates of Lemma \ref{cchange} impose an increasing power of $\langle\th \rangle$ (when $|\b'|>|\b|$). However, the independence over $(z,\bfr)$ would still hold for the space of smooth sections of $L(E)\to T^*M$ (resp. $TM$) with polynomially bounded derivatives.
\end{rem}

We note $\S'(T^*M,L(E))$ and $\S'(TM,L(E))$ the strong antiduals of $\S(T^*M,L(E))$ and $\S(TM,L(E))$, respectively.
We have the following continuous inclusion with dense image 
$$
j_{T^*M}:\S(T^*M,L(E))\to \S'(T^*M,L(E)) \qquad \text{\big(resp. } j_{TM}:\S(TM,L(E)) \to \S'(TM,L(E))\big)
$$
defined by
$$
\langle j_{T^*M}(a),b\rangle := \int_{TM^*} \Tr(ab^*) d\mu^*\qquad \text{\big(resp. }\langle j_{TM}(a),b\rangle := \int_{TM} \Tr(ab^*) d\mu^T \,\big)
$$
where $d\mu^*$ is the measure on $T^*M$ given by $d\mu^*(x,\th):= d\mu^*_x(\th)d\mu(x)$ and $d\mu^T$ is the measure on $TM$ given by $d\mu^T(x,\xi):= d\mu_x(\xi)d\mu(x)$. Note that for any $(z,\bfr)$, $d\mu^*(x,\th)= |{\del_{z,\bfr}}_x|(\th) |dx^{z,\bfr}|(x)$ (this is the Liouville measure on $T^*M$) and $d\mu^T(x,\th)= \mu^2_{z,\bfr}\circ n_z^\bfr(x)|dx^{z,\bfr}_x|(\xi) |dx^{z,\bfr}|(x)$. We have the following commutative diagram, where $M_{\mu^2}$ is the multiplication operator by the $\O_M^\times(\R^{2n})$ function $(\rx,\zeta)\mapsto \mu_{z,\bfr}^2(\rx)$,  
\[\xymatrix{
    \S(TM,L(E)) \ar[rr]^{j_{TM}} \ar[d]_{T_{z,\bfr,T}} &  \ar[r] &\S'(TM,L(E))  \\
    \S(\R^{2n},L(E_z)) \ar[r]_{M_{\mu^2}} &  \S(\R^{2n},L(E_z)) \ar[r]_j & \S'(\R^{2n},L(E_z)) \ar[u]_{T^*_{z,\bfr,T}}   \\
  }\]
and, in the case of $\S(T^*M,L(E))$ a similar diagram is valid if $M_{\mu^2}$ is replaced by the identity.
\begin{defn} The Fourier transform of $a\in \S(TM,L(E))$ is
$$
\F(a) : (x,\th)\mapsto \int_{T_x(M)} e^{-2\pi i \langle \th,\xi \rangle }\, a(x,\xi)\, d\mu_x(\xi)\, .
$$ 
\end{defn}

\begin{prop} $\F$ is a topological isomorphism from $\S(TM,L(E))$ onto $\S(T^*M,L(E))$ with inverse 
$$
\ol \F (a):= (x,\xi) \mapsto \int_{T^*_x(M)} e^{2\pi i \langle \th,\xi \rangle}\, a(x,\th)\, d\mu^*_x(\th)\, .
$$ 
The adjoint $\ol \F^*$ of $\ol \F$ coincides with $\F$ on $\S(TM,L(E))$, so we still note $\ol \F^*$ by $\F$ and $\F^*$ by $\ol\F$.
\end{prop}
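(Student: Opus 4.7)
The plan is to reduce everything to the classical Euclidean Fourier transform via the frame transfer maps $T_{z,\bfr,T}$ and $T_{z,\bfr,*}$ constructed in the previous subsection. Fixing a frame $(z,\bfr)$, the key observation is that $\tau_z^{\pm1}(x)$ depends only on the base point $x$ and so passes through the fibre integrals defining $\F$ and $\ol\F$. Combining this with the linear change of variable $\xi=(M_{z,x}^{\bfr})^{-1}(\zeta)$ on each fibre $T_xM$ (which pushes $|dx^{z,\bfr}_x|$ to Lebesgue measure on $\R^n$) and the duality identity $\langle\th,\xi\rangle=\vth\cdot\zeta$ valid when $\th=(\wt M_{z,x}^{\bfr})^{-1}(\vth)$ (a direct consequence of $\wt M_{z,x}^{\bfr}={}^t(dn_z^{\bfr})_x^{-1}$), one obtains the intertwining identities
\begin{align*}
T_{z,\bfr,*}\circ \F &= M_{\mu_{z,\bfr}\ox 1}\circ \F_{\zeta\to\vth}\circ T_{z,\bfr,T}, \\
T_{z,\bfr,T}\circ \ol\F &= M_{\mu_{z,\bfr}^{-1}\ox 1}\circ \F_{\zeta\to\vth}^{-1}\circ T_{z,\bfr,*},
\end{align*}
where $\F_{\zeta\to\vth}$ denotes the partial classical Fourier transform on $\S(\R^{2n},L(E_z))$ acting on the fibre variable, and $M_f$ denotes pointwise multiplication by $f$.

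From here the topological isomorphism claim is immediate: the $T$-maps are Fr\'echet isomorphisms by construction, $\F_{\zeta\to\vth}$ is a Fr\'echet automorphism by the classical Euclidean theory, and $M_{\mu_{z,\bfr}^{\pm1}\ox 1}$ is a Fr\'echet automorphism of $\S(\R^{2n},L(E_z))$ because $\mu_{z,\bfr}\in\O_M^\times(\R^n)$ by the $\O_M^\times$-density hypothesis on $d\mu$, and multiplication by $\O_M^\times$-functions is a topological automorphism of the Schwartz space (as recorded above). To see that $\F$ and $\ol\F$ are mutual inverses, compose the two intertwiners and use that $M_{\mu_{z,\bfr}\ox 1}$ and $\F_{\zeta\to\vth}$ act on disjoint variables and therefore commute, so that the four middle factors telescope:
$$
T_{z,\bfr,T}\circ\ol\F\circ\F = T_{z,\bfr,T}, \qquad T_{z,\bfr,*}\circ \F\circ\ol\F = T_{z,\bfr,*},
$$
whence $\ol\F\circ\F=\Id$ on $\S(TM,L(E))$ and $\F\circ\ol\F=\Id$ on $\S(T^*M,L(E))$.

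For the adjoint statement, let $a\in\S(TM,L(E))$ and $b\in\S(T^*M,L(E))$. Expanding $\langle a,\ol\F(b)\rangle_{TM}$ via the definition of $\ol\F$, pulling the $L(E_x)$-adjoint inside the trace (which turns the $+i$ phase into $-i$), and applying Fubini on $T_xM\times T^*_xM$ at each $x$ — legitimate by the joint rapid decay of $a$ and $b$ — collapses the inner $\xi$-integral into $\F(a)(x,\th)$ and yields $\langle a,\ol\F(b)\rangle_{TM}=\langle\F(a),b\rangle_{T^*M}$. This is exactly the identity $\ol\F^*\circ j_{TM}=j_{T^*M}\circ\F$, so $\ol\F^*$ restricts to $\F$ on $\S(TM,L(E))$. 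The principal obstacle is not conceptual but clerical: in establishing the intertwiners one must carefully check that the Jacobian from the fibre change of variable combines with the density factor to produce exactly $\mu_{z,\bfr}$ (and not $1$ or $\mu_{z,\bfr}^2$), and that the pairing $\langle\th,\xi\rangle$ reduces to $\vth\cdot\zeta$ in the chosen coordinates; once this bookkeeping is in place the proposition is a direct translation of Euclidean Fourier inversion and Plancherel duality.
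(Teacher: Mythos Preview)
Your proof is correct and follows essentially the same approach as the paper: both reduce to the Euclidean partial Fourier transform via the frame transfer maps $T_{z,\bfr,T}$, $T_{z,\bfr,*}$, obtain the intertwining identity $T_{z,\bfr,*}\circ\F = M_{\mu_{z,\bfr}}\circ\F_P\circ T_{z,\bfr,T}$ (the paper writes this as a commutative diagram with $\F_{z,\bfr}=\F_P\circ M_\mu=M_\mu\circ\F_P$), and deduce the adjoint statement from Parseval. Your presentation is slightly more explicit about the bookkeeping (the duality $\langle\th,\xi\rangle=\vth\cdot\zeta$, the commutation of $M_\mu$ with $\F_P$) and spells out the Fubini step for the adjoint rather than just invoking Parseval, but the substance is identical.
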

\begin{proof} Let $(z,\bfr)$ be a frame. It is straightforward to check that the following diagram commutes
\[\xymatrix{
    S(TM,L(E)) \ar[r]^{\F} \ar[d]_{T_{z,\bfr,T}} & \S(T^*M,L(E)) \\
    \S(\R^{2n},L(E_z)) \ar[r]_{\F_{z,\bfr}} &  \S(\R^{2n},L(E_z))\ar[u]_{T_{z,\bfr,*}^{-1}}  \\
  }\]
where $\F_{z,\bfr}= \F_P \circ M_{\mu} = M_{\mu}\circ \F_P$, with $M_{\mu}$ the multiplication operator on $\S(\R^{2n},L(E_z))$ defined by $M_{\mu}(a):=(\rx,\zeta)\mapsto \mu_{z,\bfr}(\rx)\, a(\rx,\zeta)$ and $\F_P$ the partial Fourier transform on the space $\S(\R^{2n},L(E_z))$ (only the variables in the second copy of $\R^{n}$ in $\R^{2n}$ being Fourier transformed). It is clear that $\F_{z,\bfr}$ is a topological isomorphism from $\S(\R^{2n},L(E_z))$ onto itself with inverse $\F_{z,\bfr}^{-1}=M_{1/\mu}\circ \ol \F_P$.
The fact that $\ol \F^*$ coincides with $\F$ on $\S(TM,L(E))$ is a consequence of the following equality 
$$
\int_{TM} \Tr(a (\ol \F(b)) ^*)\,d\mu^T = \int_{T^*M}\Tr(\F(a)b^*)\,d\mu^*
$$
for any $a\in \S(TM,L(E))$ and $b\in \S(T^*M,L(E))$, that is a direct consequence of the Parseval formula for $\F_P$.
\end{proof}

\section{Linearization and symbol maps}

\subsection{Linearization and the $\Phi_{\la}$, $\Ups_t$ diffeomorphisms}

Recall that a linearization (Bokobza-Haggiag \cite{Bokobza}) on a smooth manifold $M$ is defined as a smooth map $\nu$ from $M\times M$ into $TM$ such that $\pi \circ \nu = \pi_1$, $\nu(x,x)=0$ for any $x\in M$ and $(d_y\nu)_{y=x}=\Id_{T_x M}$. Using this map, it is then possible by restricting $\nu$ on a small neighborhood of the diagonal of $M\times M$, to obtain a diffeomorphism onto a neighborhood of the zero section of $TM$ and obtain an isomorphism between symbols (with a local control of the x variables on compact) and pseudodifferential operators modulo smoothing ideals. These isomorphisms depend on the linearization, as shown in \cite[Proposition V.3]{Bokobza}. We follow here the same idea, with a global point of view, since we are interested in the behavior at infinity. We thus consider, on the exponential manifold $(M,\exp,E,d\mu)$ a fixed linearization $\ol \psi$ that comes from an (abstract) exponential map $\psi$ on $M$ (also called linearization map in the following), so that $\ol \psi(x,y) = \psi_x^{-1} y$, and $\psi_x$ is a diffeomorphism from $T_x M$ onto $M$, with $\psi_x (0)=x$, $(d\psi_x)_0=\Id_{T_{x} M}$. For example, $\psi$ may be the exponential map $\exp$.

Let $\la \in [0,1]$ and $\Phi_\la$ be the smooth map from $TM$ onto $M\times M$ defined by 
$$
\Phi_\la: (x,\xi) \mapsto \big(\psi_x(\la \xi), \psi_x(-(1-\la)\xi) \big)\, .
$$
\begin{assum}
\label{assumH}
We suppose from now on that whenever the parameters $\la$, $\la'$, are in $]0,1[$, it is implied that the linearization map $\psi$ satisfies for any $x,y\in M$ and $t\in \R$, $\psi_{x}(t\psi_x^{-1}(y))= \psi_{y}((1-t)\psi_y^{-1}(x))$. This hypothesis, called $(H_\psi)$ in the following, is automatically satisfied if the linearization is derived from a exponential map of a connection on the manifold. 
\end{assum}
A computation shows that $\Phi_\la$ is a diffeomorphism with the following inverse $\Phi_\la^{-1}:(x,y)\mapsto \a'_{yx}(1-\la)$ for $\la\neq 0$ and $\Phi_0^{-1}(x,y):\mapsto -\a'_{xy}(0)$, where $\a_{xy}(t):=\psi_x(t\psi_x^{-1}(y))$.
Noting $\Phi_\la^{-1}(x,y)=:(m_\la(x,y),\xi_\la(x,y))$, we see that $m_\la(x,y)=\a_{xy}(\la)$ and, if $\la\neq 0$, $\xi_{\la}(x,y)=\tfrac{1}{\la}\psi_{m_{\la}(x,y)}^{-1}(x)$, while $\xi_0(x,y)=-\psi_x^{-1}(y)$.
In all the following, we shall use the symbol $W$ (for Weyl) for the value $\la=\half$, so that $m_W:=m_{\half}$, $\Phi_W:=\Phi_\half$, and similar conventions for the other mathematical symbols containing $\la$.  
Note that $m_\la$ is a smooth function from $M\times M$ onto $M$, with $m_\la(x,x)=x$ for any $x\in M$. Moreover, for any $x,y\in M$, $m_{\la}(x,y)=m_{1-\la}(y,x)$, $m_W(x,y)=m_W(y,x)$ (the ``middle point'' of $x$ and $y$), $\xi_{\la}(x,y)=-\xi_{1-\la}(y,x)$, $\xi_W (x,y)=-\xi_W (y,x)$ and $x\mapsto \Phi_\la^{-1} (x,x)$ is the zero section of $TM\to M$. Noting $j$ the involution on $M\times M$ : $(x,y)\mapsto (y,x)$, we have $\Phi_\la = j\circ \Phi_{1-\la}\circ -\Id_{TM}$. 

For any $t\in [-1,1]$ (with the convention that if $(H_\psi)$ is not satisfied, we are restricted to $t\in \set{-1,0,1}$), we define, 
$$
\Ups_t :(x,\xi)\mapsto \big(\psi_x (t\xi), \tfrac{-1}{t} \psi^{-1}_{\psi_x(t\xi)}(x)\big)
$$
with the convention $\tfrac{-1}{t} \psi^{-1}_{\psi_x(t\xi)}(x):= \xi$ if $t=0$, so that $\Ups_0=\Id_{TM}$.
A computation shows that $\Ups_{t}^{-1}=\Ups_{-t}$. The $\Phi_\la$ and $\Ups_t$ diffeomorphisms are related by the following property: for any $\la,\la'\in [0,1]$, $\Phi_{\la}^{-1}\circ \Phi_{\la'} = \Ups_{\la'-\la}$. We will use the shorthand $\Ups_{t,T}(x,\xi):=\tfrac{-1}{t} \psi^{-1}_{\psi_x(t\xi)}(x)$, so that $\Ups_{t}=(\psi\circ\ t\Id_{TM}, \Ups_{t,T} ).$

\begin{rem}
\label{Hhyp}Note that $(H_\psi)$ entails that
$(\Ups_t)_{t\in \R}$ is a one parameter subgroup of $\Diff(TM)$. 
\end{rem}

\begin{rem}
\label{remTP}
Suppose that $\psi$ is the exponential map associated to a connection $\nabla$ on $TM$, and $\a_{x,\xi}$ the unique maximal geodesic such that $\a'_{x,\xi}(0)=(x,\xi).$ It is a standard result of differential geometry (see for instance \cite[Theorem 3.3, p.206]{Lang}) that for any $v:=(x,\eta)\in TM$, and $\xi\in T_x(M)$, there exists an unique curve $\beta_{v}^\xi:\R \to TM$ such that 
$\nabla_{\a'_{v}} \beta_{v}^\xi =0$, $\pi\circ \beta_{v}^{\xi} = \a_{v}$ (in other words, $\beta_{v}^\xi$ is $\a_{v}$-parallel lift of $\a_{v}$)
and $\beta_{v}^\xi(0) = (x,\xi)$. By definition of geodesics, $\beta^{\eta}_{x,\eta}=\a'_{x,\eta}$. Moreover, $\beta_{x,\eta}^{\xi}(1)\in T_{\psi_x^\eta}(M)$, so we can define the following linear isomorphism of tangent fibers:
$P_{x,\eta}\ :\ T_x(M) \to T_{\psi_x^\eta}(M),   \quad \xi\mapsto \beta_{x,\eta}^\xi(1) \, .$
Note that $P_{x,\eta}^{-1}= P_{\psi_x^\eta,\psi_{\psi_x^\eta}^{-1}(x)}= P_{-\Ups_{1}(x,\eta)} = P_{\Ups_{-1}(x,-\eta)}$. The $P_{x,\xi}$ are the parallel transport maps along geodesics on the tangent bundle. These maps are related to the $\Ups_t$ diffeomorphisms, since a computation shows that for any $(x,\eta)\in TM$ and $t\in \R$,  $P_{x,t\eta}(\eta) = \Ups_{t,T}(x,\eta)$.
\end{rem}

If $(z,\bfr)$ is a frame, we define $\Phi_{\la,z,\bfr}:= n_{z,M^2}^{\bfr}\circ \Phi_\la \circ (n_{z,T}^\bfr)^{-1}$ and we denote $J_{\la,z,\bfr}$ its Jacobian. We also define $\Ups_{t,z,\bfr}=n_{z,T}^{\bfr}\circ \Ups_{t}\circ (n_{z,T}^{\bfr})^{-1}$ and the smooth maps from $\R^{2n}$ to $\R^{n}$: 
\begin{align*}
&\psi_z^\bfr : (\rx,\zeta) \mapsto n_z^\bfr \circ \psi \circ\, (n_{z,T}^\bfr)^{-1}(\rx,\zeta)\, ,\\
&\ol{\psi_z^\bfr}: (\rx,\ry) \mapsto  M_{z,(n_z^\bfr)^{-1}(\rx)}^\bfr\circ \psi^{-1}_{(n_z^\bfr)^{-1}(\rx)}\circ (n_{z}^\bfr)^{-1}(\ry).
\end{align*}
Noting $\psi_{z,\rx}^{\bfr}(\zeta):=\psi_{z}^\bfr(\rx,\zeta)$ and $\ol{\psi_{z,\rx}^{\bfr}}(\ry):=\ol{\psi_{z}^\bfr}(\rx,\ry)$, we have $(\psi_{z,\rx}^{\bfr})^{-1}=\ol{\psi_{z,\rx}^{\bfr}}$.
A computation shows that for any $(\rx,\zeta,\ry)\in \R^{3n}$,
\begin{equation}\label{Phi_la}
\Phi_{\la,z,\bfr}(\rx,\zeta) = (\psi_{z}^\bfr(\rx,\la\zeta),\psi_z^\bfr(\rx,-(1-\la)\zeta))  \, , \qquad
\Phi_{\la,z,\bfr}^{-1}(\rx,\ry) = (m_{\la,z,\bfr}(\rx,\ry),\xi_{\la,z,\bfr}(\rx,\ry))\, 
\end{equation}
where we defined the following functions: $m_{\la,z,\bfr}(\rx,\ry):= \psi_{z}^\bfr(\rx,\la\,\ol{\psi_{z}^\bfr}(\rx,\ry))$, $\xi_{0,z,\bfr}:= -\ol{\psi_z^\bfr}$ and for $\la\neq 0$, $\xi_{\la,z,\bfr}(\rx,\ry):= \tfrac{1}{\la} \ol{\psi_z^\bfr}(m_{\la,z,\bfr}(\rx,\ry),\rx)$. We also obtain for $t\in [-1,1]$, $(\rx,\zeta)\in \R^{2n}$, 
\begin{equation}\label{Ups_t}
\Ups_{t,z,\bfr}(\rx,\zeta) = \big(\psi_{z}^\bfr(\rx,t\zeta),\tfrac{-1}{t} \ol{\psi_z^\bfr}(\psi_z^\bfr(\rx,t\zeta),\rx)) =:  (\psi_{z}^\bfr(\rx,t\zeta),\Ups^{z,\bfr}_{t,T}(\rx,\zeta) \big)  \, ,
\end{equation}
and $\Ups_{0,z,\bfr}=\Id_{\R^{2n}}$. Note that $\Ups_{t,z,\bfr}(\rx,0)= (\rx,0)$ for any $\rx\in \R^n$ and $\Ups_{t,T}^{z,\bfr}=\tfrac{1}{t}\Ups^{z,\bfr}_{1,T}\circ I_{1,t}$ where $I_{r,r'}$ is the diagonal matrix with coefficients $I_{ii} = r$ for $i\leq n$ for $1\leq i\leq n$ and $I_{ii}=r'$ for $n+1\leq i \leq 2n$.

\subsection{$\O_M$-linearizations}

We intent to use the linearization to define topological isomorphisms between rapidly decaying section on $TM$ and $M\times M$. We thus need a control at infinity over the derivatives of the linearization $\psi$.

We note $\tau^{z,\bfr}=\tau^z\circ (n_{z,M^2}^\bfr)^{-1} \in C^\infty(\R^{2n},L(E_z))$. Remark that for any $(\rx,\ry)\in \R^{2n}$, $\tau^{z,\bfr}(\rx,\ry)$ is an unitary operator on $E_z$. We will also need the following functions parametrized by $t\in \R$: $\tau_t(x,\eta):=\tau_x(\psi_x(t\eta))$ for any $(x,\eta)\in TM$ and $\tau_{t}^{z,\bfr}(\rx,\zeta):=\tau^{z,\bfr}(\rx,\psi_z^\bfr(\rx,t\zeta))$. 

\begin{defn} A linearization $\psi$ on the exponential manifold $(M,\exp,E,d\mu)$ is said to be a $\O_M$-linearization if for any frame $(z,\bfr)$ the functions $\psi_z^\bfr$ and $\ol{\psi_z^\bfr}$ are  in $\O_M(\R^{2n},\R^n)$ and the functions $\tau_{1}^{z,\bfr}$ and $(\tau_1^{z,\bfr})^{-1}$ are in $\O_M(\R^{2n},L(E_z))$. We will say that $(M,\exp,E,d\mu, \psi)$ has a $\O_M$-bounded geometry, if it the case of $(M,\exp,E,d\mu)$ and $\psi$ is a $\O_M$-linearization.
\end{defn}

\begin{lem} 
\label{OMlem}
Suppose that $\psi$ is a $\O_M$-linearization. Then for any frame $(z,\bfr)$, $\la \in [0,1]$ and $t\in [-1,1]$,

\noindent (i) $\Phi_{\la,z,\bfr} \in \O_M^\times(\R^{2n},\R^{2n})$ and $J_{\la,z,\bfr} \in \O_M^\times(\R^{2n})$,

\noindent (ii) $\Ups_{t,z,\bfr} \in \O_M^\times(\R^{2n},\R^{2n})$ and $J(\Ups_{t,z,\bfr}) \in \O_M^\times(\R^{2n})$,

\noindent (iii) $\tau_t^{z,\bfr}$ and $(\tau_t^{z,\bfr})^{-1}$ are in $\O_M(\R^{2n},L(E_z))$.
\end{lem}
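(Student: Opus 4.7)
\medskip

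My plan is to reduce all three assertions to straightforward applications of Lemma \ref{Ssigma}, parts (iii), (v) and (vi), combined with the explicit coordinate formulas \eqref{Phi_la} and \eqref{Ups_t} and the hypothesis that $\psi_z^\bfr, \ol{\psi_z^\bfr} \in \O_M(\R^{2n},\R^n)$ and $\tau_1^{z,\bfr}, (\tau_1^{z,\bfr})^{-1} \in \O_M(\R^{2n},L(E_z))$. The key observation is that each diffeomorphism appearing in the statement is, in the chart $(z,\bfr)$, built by composing these given $\O_M$-functions with linear maps of the form $(\rx,\zeta)\mapsto(\rx,s\zeta)$ or $(\rx,\ry)\mapsto(\rx,\ry)$, which are polynomial, hence trivially in $\O_M$. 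Since $\O_M$ is stable under such composition by Lemma \ref{Ssigma}(iii), membership in $\O_M$ of the forward map will be essentially immediate; the only work is to exhibit the inverse as a similar composition so as to invoke $\O_M^\times$, and then Lemma \ref{Ssigma}(vi) produces the $\O_M^\times$ statement about the Jacobian for free.

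For (i), I will use \eqref{Phi_la}: each component of $\Phi_{\la,z,\bfr}$ is $\psi_z^\bfr \circ L$ with $L$ a linear map, hence in $\O_M(\R^{2n},\R^n)$. For the inverse I will treat the cases $\la=0$ (where $m_{0,z,\bfr}=\pi_1$ and $\xi_{0,z,\bfr}=-\ol{\psi_z^\bfr}$), $\la=1$ (where $m_{1,z,\bfr}=\pi_2$ and $\xi_{1,z,\bfr}(\rx,\ry)=\ol{\psi_z^\bfr}(\ry,\rx)$ using $\psi_{z,\rx}^\bfr\circ\ol{\psi_{z,\rx}^\bfr}=\Id$), and $\la\in(0,1)$ (where $m_{\la,z,\bfr}$ and $\xi_{\la,z,\bfr}$ are compositions of $\psi_z^\bfr$ and $\ol{\psi_z^\bfr}$), concluding in each case that $\Phi_{\la,z,\bfr}^{-1}\in\O_M(\R^{2n},\R^{2n})$, so $\Phi_{\la,z,\bfr}\in\O_M^\times(\R^{2n},\R^{2n})$, and then Lemma \ref{Ssigma}(vi) yields $J_{\la,z,\bfr}\in\O_M^\times(\R^{2n})$.

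For (ii), I handle $t=0$ trivially since $\Ups_{0,z,\bfr}=\Id_{\R^{2n}}$. For $t\in[-1,1]\setminus\{0\}$ (with $(H_\psi)$ granted by Assumption \ref{assumH} when $t\notin\{-1,1\}$), \eqref{Ups_t} expresses $\Ups_{t,z,\bfr}$ as a composition of $\psi_z^\bfr$ and $\ol{\psi_z^\bfr}$ with linear maps, putting it in $\O_M$; the inverse is $\Ups_{-t,z,\bfr}$, which is in $\O_M$ by the same argument, so $\Ups_{t,z,\bfr}\in\O_M^\times(\R^{2n},\R^{2n})$ and again Lemma \ref{Ssigma}(vi) closes the case. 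Alternatively, when convenient, the identity $\Ups_t=\Phi_0^{-1}\circ\Phi_t$ lets me derive (ii) directly from (i) as a group-theoretic corollary.

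For (iii), the computation $\tau_t^{z,\bfr}(\rx,\zeta)=\tau^{z,\bfr}(\rx,\psi_z^\bfr(\rx,t\zeta))=\tau_1^{z,\bfr}(\rx,t\zeta)$ shows $\tau_t^{z,\bfr}=\tau_1^{z,\bfr}\circ L_t$ with $L_t(\rx,\zeta)=(\rx,t\zeta)$, and similarly $(\tau_t^{z,\bfr})^{-1}=(\tau_1^{z,\bfr})^{-1}\circ L_t$. Since $L_t$ is polynomial and both $\tau_1^{z,\bfr}, (\tau_1^{z,\bfr})^{-1}$ are $\O_M(\R^{2n},L(E_z))$ by hypothesis, Lemma \ref{Ssigma}(iii), applied with $\mathfrak{E}=L(E_z)$, immediately yields the conclusion. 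The only genuine subtlety in the whole proof is the separate verification at $\la\in\{0,1\}$ and $t=0$, and keeping straight which pieces of the $\O_M$-linearization hypothesis are needed for which case; everything else is pure bookkeeping on compositions of $\O_M$-maps.
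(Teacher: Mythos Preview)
Your proposal is correct and follows essentially the same approach as the paper: both arguments express $\Phi_{\la,z,\bfr}$, its inverse, $\Ups_{t,z,\bfr}$, and $\tau_t^{z,\bfr}$ as compositions of $\psi_z^\bfr$, $\ol{\psi_z^\bfr}$, $\tau_1^{z,\bfr}$ with the linear maps $I_{1,s}:(\rx,\zeta)\mapsto(\rx,s\zeta)$ (your $L_s$) and projections, then invoke Lemma~\ref{Ssigma}(iii) for $\O_M$-stability under composition and Lemma~\ref{Ssigma}(vi) for the Jacobian. The paper's proof is slightly terser (it does not single out $\la=1$, since the formula $\xi_{\la}=\tfrac{1}{\la}\ol{\psi_z^\bfr}\circ(m_{\la,z,\bfr},\pi_1)$ already covers it), and your alternative derivation of (ii) via $\Ups_{t,z,\bfr}=\Phi_{0,z,\bfr}^{-1}\circ\Phi_{t,z,\bfr}$ is a nice shortcut the paper does not mention.
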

\begin{proof}
\noindent $(i)$ By (\ref{Phi_la}), we have $\Phi_{\la,z,\bfr} = (\psi_{z}^\bfr\circ I_{1,\la},\psi_z^\bfr\circ I_{1,\la-1})$ and $\Phi_{\la,z,\bfr}^{-1}=(m_{\la,z,\bfr},\xi_{\la,z,\bfr}) $ where $m_{\la,z,\bfr}=\psi_z^\bfr\circ I_{1,\la}\circ (\pi_1,\ol{\psi_z^\bfr})$ and 
if $\la \neq 0$, $\xi_{\la}=\tfrac{1}{\la} \ol{\psi_z^\bfr}\circ (m_{\la,z,\bfr},\pi_1)$, while $\xi_{0,z,\bfr}=-\ol{\psi_z^\bfr}$. Thus, the result is a consequence Lemma \ref{Ssigma} $(iii)$ and $(vi)$.

\noindent $(ii)$ By (\ref{Ups_t}), we have for $t\neq 0$, $\Ups_{t,z,\bfr}=(\psi_{z}^\bfr\circ I_{1,t}, \tfrac{-1}{t}\ol{\psi_z^\bfr}\circ (\psi_z^\bfr\circ I_{1,t},\pi_1))$. The result follows again from Lemma \ref{Ssigma} $(iii)$ and $(vi)$. 

\noindent $(iii)$ We have $\tau_{t}^{z,\bfr}=\tau_1^{z,\bfr}\circ I_{1,t}$ and $(\tau_{t}^{z,\bfr})^{-1}=(\tau_1^{z,\bfr})^{-1}\circ I_{1,t}$ so the result follows from Lemma \ref{Ssigma} $(iii)$.  
\end{proof}

The following lemma shows that we can obtain topological isomorphisms on spaces of rapidly decaying functions from the functions $\tau_t$ and $\Phi_{\la}$.

\begin{lem}
\label{Stopolisom}
Let $p\in \N^*$, $\tau\in \O_M^\times(\R^p,GL(E_z))$ and $\Phi \in \O_M^\times (\R^p,\R^p)$. Then the maps $L_{\tau}:= u\mapsto \tau u$, $R_{\tau}:= u\mapsto u\tau$ and $C_\Phi:= u\mapsto u\circ \Phi$ are topological isomorphisms of $\S(\R^p, L(E_z))$.
\end{lem}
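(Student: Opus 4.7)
The plan is to treat the three maps separately: the multiplication operators $L_\tau, R_\tau$ are handled by a direct Leibniz-rule argument, while $C_\Phi$ requires the multivariate Faà di Bruno formula together with a polynomial two-sided comparison $\langle \Phi(\rx)\rangle \asymp_{\mathrm{poly}}\langle \rx\rangle$. In all three cases, once continuity is established, the inverse is of the same form ($L_{\tau^{-1}}$, $R_{\tau^{-1}}$, $C_{\Phi^{-1}}$) and belongs to the same class by hypothesis, so topological invertibility is automatic; the real work is continuity.

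For $L_\tau$ (and symmetrically $R_\tau$), I would fix a $p$-multi-index $\nu$ and $k\in\N$, and expand $\del^\nu(\tau u)$ by Leibniz. Each term has the form $C_{\mu,\nu}\,\del^{\nu-\mu}\tau\cdot\del^{\mu}u$; by hypothesis $\tau\in\O_M$, so there exist $N_{\nu,\mu}\in\N$ and $K_{\nu,\mu}>0$ with $\|\del^{\nu-\mu}\tau(\rx)\|\le K_{\nu,\mu}\langle\rx\rangle^{N_{\nu,\mu}}$. Setting $N_\nu:=\max_\mu N_{\nu,\mu}$ and using that $u\in\S(\R^p,L(E_z))$ controls $\langle\rx\rangle^{k+N_\nu}\|\del^\mu u\|$ for all $\mu\le\nu$, the seminorm $q_{\nu,k}(\tau u)$ is bounded by a finite sum of seminorms $q_{\mu,k+N_\nu}(u)$. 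This proves continuity of $L_\tau$. The same estimate applied to $\tau^{-1}\in\O_M^\times$ gives continuity of $L_\tau^{-1}=L_{\tau^{-1}}$.

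For $C_\Phi$, the first step is the two-sided polynomial comparison. Since $\Phi,\Phi^{-1}\in\O_M$, the mean value inequality applied to $\Phi$ along the segment $[0,\rx]$ (using the bound $\|d\Phi(\rx)\|\le C\langle\rx\rangle^{q}$) gives $\langle\Phi(\rx)\rangle\le K\langle\rx\rangle^{q+1}$; the same bound applied to $\Phi^{-1}$ yields $\langle\rx\rangle=\langle\Phi^{-1}(\Phi(\rx))\rangle\le K'\langle\Phi(\rx)\rangle^{q'+1}$, so there exist $c>0$ and $r\ge 1$ with $\langle\Phi(\rx)\rangle\ge c\langle\rx\rangle^{1/r}$ for all $\rx$. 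Next, the Faà di Bruno formula (Theorem \ref{FaaCS}) expands
\[
\del^\nu(u\circ\Phi)(\rx)=\sum_{1\le|\lambda|\le|\nu|}(\del^\lambda u)(\Phi(\rx))\,P_{\nu,\lambda}(\Phi)(\rx),
\]
where $P_{\nu,\lambda}(\Phi)$ is a linear combination of products of derivatives of $\Phi$, hence in $\O_M$ with some polynomial bound $|P_{\nu,\lambda}(\Phi)(\rx)|\le C_{\nu,\lambda}\langle\rx\rangle^{M_{\nu,\lambda}}$. Given $k\in\N$, the rapid decay of $u$ gives $\|(\del^\lambda u)(\Phi(\rx))\|\le K_{\lambda}\langle\Phi(\rx)\rangle^{-m}\le K_\lambda c^{-m}\langle\rx\rangle^{-m/r}$ for any $m$; choosing $m:=r(k+\max_\lambda M_{\nu,\lambda})$ yields $\langle\rx\rangle^k\|\del^\nu(u\circ\Phi)(\rx)\|$ bounded by a finite sum of seminorms of $u$, proving continuity of $C_\Phi$.

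The inverse $C_\Phi^{-1}=C_{\Phi^{-1}}$ is handled by the same argument with $\Phi$ replaced by $\Phi^{-1}\in\O_M^\times(\R^p,\R^p)$, giving the topological isomorphism claim for $C_\Phi$. The only nontrivial step, and the one I would be most careful with, is the polynomial lower bound $\langle\Phi(\rx)\rangle\ge c\langle\rx\rangle^{1/r}$: without it, rapid decay of $u$ at infinity does not transfer to rapid decay of $u\circ\Phi$, and this is precisely where the invertibility hypothesis $\Phi^{-1}\in\O_M$ (not merely $\Phi\in\O_M$) is used.
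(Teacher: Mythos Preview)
Your proof is correct and follows essentially the same approach as the paper: Leibniz for $L_\tau,R_\tau$, Faà di Bruno plus the two-sided polynomial comparison $\langle\Phi(\rx)\rangle\asymp_{\mathrm{poly}}\langle\rx\rangle$ for $C_\Phi$, with inverses handled by the same argument applied to $\tau^{-1},\Phi^{-1}$. One minor simplification: since $\O_M$ bounds the zeroth derivative as well, you can read off $\langle\Phi(\rx)\rangle\le K\langle\rx\rangle^{q}$ directly without invoking the mean value inequality.
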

\begin{proof} Since $L_{\tau}^{-1}= L_{\tau^{-1}}$, $R_{\tau}^{-1}=R_{\tau^{-1}}$ and $C_\Phi^{-1}=C_{\Phi^{-1}}$, we only need to check the continuity of $L_{\tau}$, $R_{\tau}$ and $C_\Phi$. The continuity of $L_{\tau}$ and $R_{\tau}$ is a direct application of Leibniz formula. Let $\nu$ be a $p$-multi-index and $r\in \N$. Theorem \ref{FaaCS} implies that for any $u\in \S(\R^p,L(E_z))$, 
$$
q_{\nu,N}(u\circ \Phi) \leq \sum_{|\la|\leq |\nu|} \sup_{\rx\in \R^p} \langle \rx \rangle ^{N}|P_{\nu,\la}(\Phi)(\rx)| \norm{(\del^\la u)\circ\Phi (\rx) }_{L(E_z)}
$$
where the functions $P_{\nu,\la}(\Phi)$ are such that $|P_{\nu,\la}(\Phi)(\rx)| \leq C_\nu \langle \rx\rangle ^{q_\nu}$ for a $q_\nu \in \N^*$ and a $C_\nu>0$. Since $\langle \Phi^{-1} (\rx) \rangle \leq C \langle \rx\rangle ^r$ for a $r\in \N^*$ and a $C>0$, we see that there is $C'_\nu>0$ such that $q_{\nu,N}(u\circ \Phi) \leq C'_\nu \sum_{|\la|\leq |\nu|} q_{\la,(q_\nu+N)r}(u)$, which gives the result. 
\end{proof}

\begin{lem} If $(M,\exp,E,d\mu)$ has a $\O_M$-bounded geometry and $\psi$ is a linearization such that there exists $(z_0,\bfr_0)$ such that the functions $\psi_{z_0}^{\bfr_0}$, $\ol \psi_{z_0}^{\bfr_0}$ are in $\O_M(\R^{2n},\R^n)$ and $\tau_1^{z_0,\bfr_0}$, $(\tau_{1}^{z_0,\bfr_0})^{-1}$ are in $\O_M(\R^{2n},L(E_{z_0}))$, then $\psi$ is a $\O_M$-linearization.
\end{lem}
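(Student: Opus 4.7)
The plan is to fix an arbitrary frame $(z,\bfr)$ and express each of the four functions $\psi_z^\bfr$, $\ol{\psi_z^\bfr}$, $\tau_1^{z,\bfr}$, $(\tau_1^{z,\bfr})^{-1}$ as a composition and product of the corresponding objects at the reference frame $(z_0,\bfr_0)$ together with ``coordinate-change'' diffeomorphisms and parallel-transport sections coming from the $\O_M$-bounded geometry of $(M,\exp,E,d\mu)$. The closure of $\O_M$-classes under composition and multiplication (Lemma \ref{Ssigma}\,(iii) and Leibniz) will then conclude. The argument naturally splits into three steps.

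First, I would introduce the tangent coordinate-change diffeomorphism $\Psi := n_{z_0,T}^{\bfr_0}\circ(n_{z,T}^{\bfr})^{-1}$ on $\R^{2n}$, and check directly from the definitions that
$$
\Psi(\rx,\zeta) = \bigl(\psi_{z_0,z}^{\bfr_0,\bfr}(\rx),\, (d\psi_{z_0,z}^{\bfr_0,\bfr})_\rx\,\zeta\bigr).
$$
By hypothesis $(\O_M 1)$, $\psi_{z_0,z}^{\bfr_0,\bfr}$ and $\psi_{z,z_0}^{\bfr,\bfr_0}$ together with all their derivatives are polynomially bounded, so $\Psi \in \O_M^\times(\R^{2n},\R^{2n})$. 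The identity $\psi_z^\bfr = \psi_{z,z_0}^{\bfr,\bfr_0}\circ \psi_{z_0}^{\bfr_0}\circ \Psi$ and Lemma \ref{Ssigma}\,(iii) then immediately give $\psi_z^\bfr \in \O_M(\R^{2n},\R^n)$.

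Second, for $\ol{\psi_z^\bfr}$ I would work with $\Phi_{1,z,\bfr}(\rx,\zeta)=(\psi_z^\bfr(\rx,\zeta),\rx)$, use the product coordinate-change $\psi^{\bfr,\bfr_0}_{z,z_0,M^2}:=(\psi_{z,z_0}^{\bfr,\bfr_0},\psi_{z,z_0}^{\bfr,\bfr_0})\in\O_M^\times(\R^{2n},\R^{2n})$ to get the factorization $\Phi_{1,z,\bfr}=\psi^{\bfr,\bfr_0}_{z,z_0,M^2}\circ \Phi_{1,z_0,\bfr_0}\circ \Psi$, and invert it. Since $\Phi_{1,z_0,\bfr_0}^{-1}$ is $\O_M$ (its two components are $\pi_2$ and $\ol{\psi_{z_0}^{\bfr_0}}$, both $\O_M$ by hypothesis), Lemma \ref{Ssigma}\,(iii) gives $\Phi_{1,z,\bfr}^{-1}\in\O_M$, and $\ol{\psi_z^\bfr}(\rx,\ry)$ is precisely the second component of $\Phi_{1,z,\bfr}^{-1}(\ry,\rx)$.

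Third, for the transport sections, the key algebraic identity is
$$
\tau^{z}(x,y) = (\tau_z^{-1}\tau_{z_0})(y)\cdot \tau^{z_0}(x,y)\cdot (\tau_{z_0}^{-1}\tau_z)(x),
$$
which follows from $\tau^{z}(x,y) = \tau_{zy}^{-1}\tau_{xy}\tau_{zx}$ by inserting $\tau_{z_0y}\tau_{z_0y}^{-1}$ and $\tau_{z_0x}^{-1}\tau_{z_0x}$. Evaluating at $y = \psi_x(\xi)$ and transferring through $(n_{z,T}^\bfr)^{-1}$ yields
$$
\tau_1^{z,\bfr}(\rx,\zeta) = \wt U_1\bigl(\psi_z^\bfr(\rx,\zeta)\bigr)\,\tau_1^{z_0,\bfr_0}\bigl(\Psi(\rx,\zeta)\bigr)\,\wt U_2(\rx),
$$
where $\wt U_1:=(\tau_z^{-1}\tau_{z_0})\circ(n_z^\bfr)^{-1}$ and $\wt U_2:=(\tau_{z_0}^{-1}\tau_z)\circ(n_z^\bfr)^{-1}$ are in $\O_M(\R^n,L)$ together with their inverses by hypothesis $(\O_M 2)$. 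Combined with $\psi_z^\bfr\in\O_M$ (step one), $\Psi\in\O_M$, and $\tau_1^{z_0,\bfr_0}\in\O_M$ (hypothesis), Lemma \ref{Ssigma}\,(iii) and Leibniz yield $\tau_1^{z,\bfr}\in\O_M(\R^{2n},L(E_z))$; the analogous identity for $\tau^z(x,y)^{-1}$ with $(\tau_1^{z_0,\bfr_0})^{-1}\in\O_M$ by hypothesis gives the statement for the inverse.

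The main subtlety is in the third step: contrary to the scalar pieces $\psi_z^\bfr$ and $\ol{\psi_z^\bfr}$, parallel transport on a general $(M,\nabla^E)$ is \emph{not} transitive ($\tau_{xy}\tau_{yz}\neq\tau_{xz}$ in general), so one cannot simply pass $\tau_1^{z_0,\bfr_0}$ through pointwise multiplicative factors; one must introduce the smooth sections $U_1,U_2$ and carefully evaluate $U_1$ at the point $\psi_x(\xi)$, which forces the use of step one's conclusion ($\psi_z^\bfr\in\O_M$) to ensure that the composition $\wt U_1\circ\psi_z^\bfr$ remains in $\O_M$.
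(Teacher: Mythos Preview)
Your proof is correct and follows the same approach as the paper: derive the frame-change identities for $\psi_z^\bfr$, $\ol{\psi_z^\bfr}$ and $\tau^{z,\bfr}$, then invoke the stability of $\O_M$ under composition and product (Lemma~\ref{Ssigma}\,(iii)). The paper's proof is the one-line version, simply listing the three formulas $\psi_{z}^\bfr=\psi_{z,z_0}^{\bfr,\bfr_0}\circ \psi_{z_0}^{\bfr_0}\circ \psi_{z_0,z,T}^{\bfr_0,\bfr}$, $\ol\psi_{z,\rx}^\bfr(\ry)=(d\psi_{z_0,z}^{\bfr_0,\bfr})^{-1}_\rx \, \ol\psi_{z_0}^{\bfr_0}\circ \psi_{z_0,z,M^2}^{\bfr_0,\bfr}(\rx,\ry)$ and the analogous identity for $\tau^{z,\bfr}$; your only presentational difference is that for $\ol{\psi_z^\bfr}$ you recover it as the second component of $\Phi_{1,z,\bfr}^{-1}$ rather than writing the direct pointwise formula, which is equivalent but slightly less economical.
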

\begin{proof} The result is a direct consequence of the formulas $\psi_{z}^\bfr=\psi_{z,z_0}^{\bfr,\bfr_0}\circ \psi_{z_0}^{\bfr_0}\circ \psi_{z_0,z,T}^{\bfr_0,\bfr}$, $\ol\psi_{z,\rx}^\bfr(\ry)=(d\psi_{z_0,z}^{\bfr_0,\bfr})^{-1}_\rx \, \ol\psi_{z_0}^{\bfr_0}\circ \psi_{z_0,z,M^2}^{\bfr_0,\bfr}(\rx,\ry)$ and $\tau^{z,\bfr}=(\tau_z^{-1}\tau_{z_0})\circ \pi_2 \circ (n_{z,M^2}^{\bfr})^{-1}\,\tau^{z_0,\bfr_0}\circ \psi_{z_0,z,M^2}^{\bfr_0,\bfr} \, 
(\tau_{z_0}^{-1}\tau_{z})\circ \pi_1 \circ (n_{z,M^2}^{\bfr})^{-1}$.
\end{proof}

\subsection{Symbol maps and $\la$-quantization} \label{quantisec}

\begin{assum} We suppose in this section and in section \ref{moyalsection} that $(M,\exp,E,d\mu, \psi)$ has a $\O_M$-bounded geometry.
\end{assum}

The operator $\F$ is a topological isomorphism from $\S'(TM,L(E))$ onto $\S'(T^*M,L(E))$.
We shall now introduce a topological isomorphism between $\S'(M\times M,L(E))$ and $\S'(TM,L(E))$. 
We define the linear application $\Ga_\la$ from $C^\infty(M\times M,L(E))$ into $C^\infty(TM,L(E)))$:
$$
\Ga_\la(K): v \mapsto K^{\pi(v)}\circ \Phi_\la(v)\, .
$$
 As a consequence, $\Ga_\la(K)= \tau^{-1}_{\la}\,(K\circ \Phi_\la)\,\tau_{\la-1}$ and $\Ga_\la^{-1}(a)=(\tau_{\la}\,a\,\tau_{\la-1}^{-1})\circ \Phi_\la^{-1}$. For a given frame $(z,\bfr)$, we denote $\Ga_{\la,z,\bfr}:=T_{z,\bfr,T}\circ \Ga_\la \circ T_{z,\bfr,M^2}^{-1}$. A computation shows that for any smooth function $u\in C^{\infty}(\R^{2n},L(E_z))$, $\Ga_{\la,z,\bfr}(u)=(\tau_{\la}^{z,\bfr})^{-1}(u\circ \Phi_{\la,z,\bfr}) \tau_{\la-1}^{z,\bfr}$.

Let us define the smooth strictly positive functions on $\R^{2n}$ and $M\times M$ respectively:
\begin{equation}
\label{mu_la_def}
\mu_{\la,z,\bfr}(\rx,\ry):= \tfrac{\mu_{z,\bfr}(\rx)\mu_{z,\bfr}(\ry)}{\mu_{z,\bfr}^2(m_{\la,z,\bfr}(\rx,\ry))}\, |J_{\la,z,\bfr}|\circ \Phi_{\la,z,\bfr}^{-1}(\rx,\ry) \qquad \mu_{\la}:=\mu_{\la,z,\bfr}\circ (n_{z}^\bfr,n_z^\bfr).
\end{equation}
It is straithtforward to check that $\mu_{\la}$ is indeed independent of $(z,\bfr)$. Note that $\mu_{1-\la}(x,y)=\mu_\la(y,x)$.
Since $\mu_{\la,z,\bfr}\in \O_M^\times(\R^{2n})$, the operator of multiplication $M_{\mu_{\la}}$ is a topological isomorphism on $\S(M\times M,L(E))$. Note also that $\Ga_\la \circ M_{\mu_\la} = M_{\mu_\la\circ \Phi_\la}\circ \Ga_\la$.

\begin{prop} $\Ga_\la$ is a topological isomorphism from $\S(M\times M,L(E))$ onto $\S(TM,L(E))$. Moreover,
$\wt \Ga_\la \circ j_{M^2}= j_{TM}\circ \Ga_\la \circ M_{\mu_\la}$, where $\wt \Ga_\la:={\Ga_\la^{-1}}^*$.
\end{prop}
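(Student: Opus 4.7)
My plan is to reduce both assertions to computations in a single frame $(z,\bfr)$ via the topological isomorphisms $T_{z,\bfr,M^2}:\S(M\times M,L(E))\to \S(\R^{2n},L(E_z))$ and $T_{z,\bfr,T}:\S(TM,L(E))\to \S(\R^{2n},L(E_z))$. Since by construction $\Ga_{\la,z,\bfr}=T_{z,\bfr,T}\circ \Ga_\la\circ T_{z,\bfr,M^2}^{-1}$, it suffices to show that $\Ga_{\la,z,\bfr}$ is a topological isomorphism of $\S(\R^{2n},L(E_z))$, and then to verify the duality identity after transferring both sides to the same Fr\'echet space.

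For the first part, I would use the explicit formula $\Ga_{\la,z,\bfr}(u)=(\tau_\la^{z,\bfr})^{-1}\,(u\circ \Phi_{\la,z,\bfr})\,\tau_{\la-1}^{z,\bfr}$ to decompose $\Ga_{\la,z,\bfr}= L_{(\tau_\la^{z,\bfr})^{-1}}\circ R_{\tau_{\la-1}^{z,\bfr}}\circ C_{\Phi_{\la,z,\bfr}}$. Lemma \ref{OMlem}(i),(iii) provides $\Phi_{\la,z,\bfr}\in \O_M^\times(\R^{2n},\R^{2n})$ and $\tau_t^{z,\bfr},(\tau_t^{z,\bfr})^{-1}\in \O_M(\R^{2n},L(E_z))$ for $t\in\{\la,\la-1\}\subset[-1,1]$ (the fact that $\tau_t^{z,\bfr}$ takes values in the unitary group gives in fact a pointwise inverse, but even the weaker statement suffices). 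Lemma \ref{Stopolisom} then identifies each of the three factors as a topological isomorphism of $\S(\R^{2n},L(E_z))$, so the composition is as well.

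For the duality identity, I would test both sides against an arbitrary $a\in \S(TM,L(E))$, setting $u:=T_{z,\bfr,M^2}(K)$ and $v:=T_{z,\bfr,T}(a)$. The left-hand side reads
$$
\langle \wt\Ga_\la(j_{M^2}(K)),a\rangle=\int_{\R^{2n}}\Tr\bigl(u(\rx',\ry')\,\Ga_{\la,z,\bfr}^{-1}(v)^{*}(\rx',\ry')\bigr)\,\mu_{z,\bfr}(\rx')\mu_{z,\bfr}(\ry')\,d\rx'\,d\ry',
$$
while the right-hand side, after expressing $d\mu^T$ in the frame, equals
$$
\int_{\R^{2n}}\Tr\bigl(\Ga_{\la,z,\bfr}(\mu_{\la,z,\bfr}u)(\rx,\zeta)\,v^*(\rx,\zeta)\bigr)\,\mu_{z,\bfr}^{2}(\rx)\,d\rx\,d\zeta.
$$
The decisive step is the change of variables $(\rx',\ry')=\Phi_{\la,z,\bfr}(\rx,\zeta)$ in the second integral. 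This produces the factor $|J_{\la,z,\bfr}|^{-1}\circ \Phi_{\la,z,\bfr}^{-1}(\rx',\ry')$, and combining it with $\mu_{z,\bfr}^{2}(\rx)=\mu_{z,\bfr}^{2}\bigl(m_{\la,z,\bfr}(\rx',\ry')\bigr)$ and the definition \eqref{mu_la_def} of $\mu_{\la,z,\bfr}$ gives precisely the weight $\mu_{z,\bfr}(\rx')\mu_{z,\bfr}(\ry')$; this is the very reason $\mu_\la$ is defined as it is. Cyclicity of the trace and the pointwise unitarity of $\tau_\la^{z,\bfr}$ and $\tau_{\la-1}^{z,\bfr}$ (a consequence of the compatibility of $\nabla^{E}$ with the hermitian metric) identify the two integrands, concluding the proof.

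The computational heart is the change-of-variables identity in the final paragraph; everything else is formal once Lemmas \ref{OMlem} and \ref{Stopolisom} are in hand. The only mildly subtle point is to keep straight which factor of $\mu_{z,\bfr}$ comes from $d\mu^{T}$ and which from $d\mu\otimes d\mu$, and to handle the degenerate case $\la=0$ separately in the definition of $\xi_{\la,z,\bfr}$, but the algebra is identical.
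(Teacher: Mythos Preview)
Your proposal is correct and follows essentially the same route as the paper's proof. The first part is identical: the decomposition $\Ga_{\la,z,\bfr}=L_{(\tau_\la^{z,\bfr})^{-1}}\circ R_{\tau_{\la-1}^{z,\bfr}}\circ C_{\Phi_{\la,z,\bfr}}$ together with Lemmas~\ref{OMlem} and~\ref{Stopolisom}. For the duality identity, the paper first establishes the $\R^{2n}$-level relation $\wt\Ga_{\la,z,\bfr}\circ j=j\circ M_{|J_{\la,z,\bfr}|}\circ\Ga_{\la,z,\bfr}$ via the change of variables $(m,\zeta)=\Phi_{\la,z,\bfr}^{-1}(\rx,\ry)$ and then lifts through the commutative diagrams relating $j,j_{M^2},j_{TM}$; you instead test the $M$-level identity directly and push both sides to one frame before changing variables---the same computation in a different order, with the same use of trace cyclicity, unitarity of $\tau_t^{z,\bfr}$, and the defining formula~\eqref{mu_la_def} for $\mu_{\la,z,\bfr}$.
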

\begin{proof} Let $(z,\bfr)$ be a frame. It suffices to prove that $\Ga_{\la,z,\bfr}$
is a topological isomorphism from $\S(\R^{2n},L(E_z))$ onto itself. Since $\Ga_{\la,z,\bfr}=L_{(\tau_\la^{z,\bfr})^{-1}}\circ R_{\tau_{\la-1}^{z,\bfr}}\circ C_{\Phi_{\la,z,\bfr}}$, the result follows from Lemma \ref{Stopolisom} and Lemma \ref{OMlem} $(i)$ and $(iii)$.
Let $u,v\in \S(\R^{2n},L(E_z))$. We have (with $j$ the canonical inclusion from $\S(\R^{2n},L(E_z))$ into $\S'(\R^{2n},L(E_z))$: 
\begin{align*}
(\wt\Ga_{\la,z,\bfr}\circ j (u))(v)&= \int_{\R^{2n}} \Tr\big(u(\rx,\ry)(\Ga_{\la,z,\bfr}^{-1}(v)(\rx,\ry))^*\big)\,d\rx\, d\ry \,  \\
&=\int_{\R^{2n}} \Tr\big((\tau_{\la}^{z,\bfr})^{-1}\circ \Phi_{\la,z,\bfr}^{-1}(\rx,\ry)\  u(\rx,\ry)\  \tau_{\la-1}^{z,\bfr}\circ \Phi_{\la,z,\bfr}^{-1}(\rx,\ry)\ \\
&\hspace{3cm}  v^*\circ \Phi_{\la,z,\bfr}^{-1}(\rx,\ry)\big)\, d\rx\, d\ry \\ 
&=\int_{\R^{2n}} \Tr(\Ga_{\la,z,\bfr}(u)(m,\zeta) v^*(m,\zeta)) |J_{\la,z,\bfr}|(m,\zeta)\, dm\, d\zeta \\
&= (j\circ M_{|J_{\la,z,\bfr}|}\circ \Ga_{\la,z,\bfr}(u)) (v)
\end{align*} where we used the following change of variables $(m,\zeta):=\Phi_{\la,z,\bfr}^{-1}(\rx,\ry)$.
Thus, we have $\wt\Ga_{\la,z,\bfr}\circ j = j\circ M_{|J_{\la,z,\bfr}|}\circ \Ga_{\la,z,\bfr}$. The relation $\wt \Ga_\la \circ j_{M^2}= j_{TM}\circ \Ga_\la \circ M_{\mu_\la}$ now follows since $M_{|J_{\la,z,\bfr}|}\circ \Ga_{\la,z,\bfr} = \Ga_{\la,z,\bfr}\circ M_{|J_{\la,z,\bfr}|\circ \Phi_{\la,z,\bfr}^{-1}}$, $T_{z,\bfr,T}^*\circ j\circ M_{\mu^2_{z,\bfr}}=j_{TM}\circ T_{z,\bfr,T}^{-1}$ and $T_{z,\bfr,M^2}^*\circ j \circ M_{\mu_{z,\bfr}\ox \mu_{z,\bfr}}=j_{M^2}\circ T_{z,\bfr,M^2}^{-1}$. 
\end{proof}

As a consequence, $\wt \Ga_\la$ is a topological isomorphism from the space of tempered distributional $L(E)$-sections on $M\times M$, $\S'(M\times M,L(E))$ onto $\S'(TM,L(E))$ and when restricted (in the sense of the previous continous inclusions) to $\S(M\times M,L(E))$, is equal to $\Ga_\la \circ M_{\mu_\la}^{-1}$, so provides a topological isomorphism from $\S(M\times M,L(E))$ onto $\S(TM,L(E))$. Fourier transform coupled with $\wt\Ga_\la$ lead us to the following natural isomorphism from $\S'(M\times M,L(E))$ onto $\S'(T^*M,L(E))$. 

\begin{defn} Let $\la\in [0,1]$. The $\la$-symbol map is the topological isomorphism from $\S'(M\times M,L(E))$ onto $\S'(T^*M,L(E))$: $\sigma_\la:=\F\circ \wt \Ga_\la$. The $\la$-quantization map is the inverse of $\sigma_\la$, noted $\mathfrak{Op}_\la$.
\end{defn}

Thus, the data of a tempered distributional section on the cotangent bundle (i.e. an element of $\S'(T^*M,L(E))$) determines in an unique way (for a given $\la$), an operator continuous from $\S$ to $\S'$, and vice versa. Remark that $\sigma_\la \circ j_{M^2}= j_{T^*M}\circ \F\circ \Ga_\la\circ M_{\mu_\la}$ and $\Op_\la\circ j_{T^*M}=j_{M^2}\circ M_{1/\mu_{\la}}\circ \Ga_\la^{-1}\circ \ol\F$. If $(z,\bfr)$ is a frame then, noting $\Op_{\la,z,\bfr}:= \wt T_{z,\bfr,M^2} \circ \Op_\la\circ \wt T_{z,\bfr,*}^{-1}$, we obtain $\Op_{\la,z,\bfr}=\Ga_{\la,z,\bfr}^*\circ M_{\mu_{z,\bfr}}^*\circ \F_P^*$ so that for any $u\in \S(\R^{2n},L(E_z))$ and $b\in \O_M(\R^{2n},L(E_z))$, 
\begin{equation}
\langle \Op_{\la,z,\bfr}(b),u\rangle= \int_{\R^{3n}} e^{2\pi i \langle \zeta,\vth\rangle} \Tr\big(\mu b(\rx,\vth) (\Ga_{\la,z,\bfr}(u))^*(\rx,\zeta)\big)\,d\zeta\,d\vth\,d\rx\, .\label{eq-oplazb}
\end{equation}
where $\mu b:(\rx,\vth)\mapsto  \mu_{z,\bfr}(\rx)\,b(\rx,\vth)$.

\subsection{Moyal product} \label{moyalsection}

The applications $\Op_{0}$, $\Op_{1}$, $\mathfrak{Op}_W:=\Op_\half$ are respectively the normal, antinormal and Weyl quantization maps. Remark that for any $T\in \S'(T^*M,L(E))$, $\mathfrak{Op}_\la(T^*) = (\mathfrak{Op}_{1-\la}(T))^\dag$. In particular
$$
\mathfrak{Op}_0(T^*) = (\mathfrak{Op}_1(T))^\dag\, ,\qquad \mathfrak{Op}_W(T^*) = (\mathfrak{Op}_W(T))^\dag 
$$
where $\dag$ is the topological isomorphism of $\S'(M\times M,L(E))$ defined as $\langle K^\dag, u\rangle:=\ol{\langle K , u^*\circ j\rangle}$ with $j$ the diffeomorphism on $M\times M : (x,y)\mapsto (y,x)$ and $u\in \S(M\times M,L(E))$. The kernel of the adjoint $A^\dag$ of any operator $A\in L(\S,\S')$ is $(K_A)^\dag$.
As a consequence, $\sigma_\la$ is a linear topological isomorphism (and a $*$-isomorphism in the case of the Weyl quantization) from the algebra $\Re(\S)=L(S,S)\cap L(S',S')$ of regular operators onto its image $\mathfrak{M}_\la:=\sigma_\la(\Re(\S))$. We can transport the operator composition in the world of functions, by defining the $\la$-product on $\mathfrak{M}_\la$ as
$$
T\circ_\la T':= \sigma_\la(\mathfrak{Op}_\la(T)\,\mathfrak{Op}_\la(T'))
$$
so that $\mathfrak{M}_\la$ forms an algebra, and $\mathfrak{M}_\la^*=\mathfrak{M}_{1-\la}$. In the case of $\la=\half$, we recover the Moyal $*$-algebra $\mathfrak{M}_W$ and the Moyal product $\circ_W$. The space $\Psi^{-\infty}(M)\simeq \S(M\times M,L(E))$ of isotropic smoothing operators being an $*$-ideal of $\Re(\S)$, the space $\S(T^*M,L(E))=\sigma_{\la}(\Psi^{-\infty}(M))$ forms an ideal of $\mathfrak{M}_\la$. 
Since we will focus on the pseudodifferential calculus over $M$, we shall not investigate in this chapter the full analysis of the Moyal product over $T^*M$. Note however the following property on $\S(T^*M):=\S(T^*M,L(M\times \C)) \simeq \S(T^*M,\C)$:

\begin{prop}
\label{la-product}
$(\S(T^*M),\circ_\la)$ is a (noncommutative, nonunital) Fr\'{e}chet algebra. Moreover,
$$ 
a\circ_\la b\,(x,\eta) = \int_{T_x(M) \times M} d\mu_{x}(\xi)d\mu(y)\int_{V^\la_{x,\xi,y}} d\mu_{x,\xi,y}^*(\th,\th')\, g^\la_{x,\xi,y}\,e^{2\pi i \om^\la_{x,\xi,y}(\eta,\th,\th')} a(y^\la_{x,\xi},\th)\,b(y^{1-\la}_{x,-\xi},\th')
$$
where $y_{x,\xi}^{\la}:=m_\la(\psi_x^{\la\xi},z)$, $\ol y_{x,\xi}^{\la}:=\xi_{\la}(\psi_x^{\la\xi},z)$ and 
\begin{align*}
&V_{x,\xi,y}^\la:= T^*_{y_{x,\xi}^\la}(M)\times T^*_{y^{1-\la}_{x,-\xi}}(M)\, ,\qquad d\mu_{x,\xi,y}^*(\th,\th'):=d\mu^*_{y^\la_{x,\xi}}(\th)\,d\mu^*_{y^{1-\la}_{x,-\xi}}(\th')\, , \\
&g_{x,\xi,y}^\la:= \tfrac{\mu_\la(\psi_x^{\la \xi},\psi_x^{(\la-1)\xi})}{\mu_\la(\psi_x^{\la\xi},y)\,\mu_\la(y,\psi_x^{(\la-1)\xi})}\, , \\
&\om_{x,\xi,y}^\la(\eta,\th,\th'):=\langle \th,\ol y_{x,\xi}^\la\rangle -  \langle \th', \ol y_{x,-\xi}^{1-\la}\rangle - \langle \eta ,\xi \rangle \, .
\end{align*}

\end{prop}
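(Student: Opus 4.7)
The statement has two parts: the Fr\'echet algebra property and the explicit integral formula. My strategy is to handle the algebra property abstractly, transferring everything through the topological isomorphism $\sigma_\la$ to work with kernels, and then to derive the formula by a careful direct computation plugging in the definitions of $\Op_\la$ and $\sigma_\la$.

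For the algebra structure, observe that the map $\sigma_\la = \F\circ\wt\Ga_\la$ restricts to a topological isomorphism $\S(M\times M,L(E))\to \S(T^*M,L(E))$ (this follows from $\wt\Ga_\la\circ j_{M^2} = j_{TM}\circ \Ga_\la\circ M_{\mu_\la}$ together with the already-established fact that $\F$, $\Ga_\la$, and $M_{\mu_\la}$ are topological isomorphisms of the corresponding Schwartz spaces). Under this isomorphism, the $\la$-product $\circ_\la$ corresponds to kernel composition $(K_1,K_2)\mapsto \int K_1(x,y)K_2(y,\cdot)\,d\mu(y)$. Thus it suffices to show that this composition is a jointly continuous bilinear map $\S(M\times M,L(E))\times\S(M\times M,L(E))\to \S(M\times M,L(E))$. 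I would transfer to a frame $(z,\bfr)$ via $T_{z,\bfr,M^2}$, where the composition becomes $(K_1,K_2)\mapsto \int K_1(\rx,\rz)\,\mu_{z,\bfr}(\rz)\,K_2(\rz,\ry)\,d\rz$ on $\S(\R^{2n},L(E_z))$. Since $\mu_{z,\bfr}\in\O_M^\times(\R^n)$, the factor $\mu_{z,\bfr}(\rz)$ is polynomially bounded in $\rz$ and its rapid decay properties of $K_1,K_2$ in all variables yield the Schwartz-closure by standard arguments (integrate by parts, use $\langle \rx,\rz\rangle\langle \rz,\ry\rangle$ weights to control $\langle \rx,\ry\rangle^N$). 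Non-unitality is clear because the would-be unit under $\circ_\la$ corresponds to the identity operator on $L^2$, whose Schwartz kernel is the diagonal Dirac mass, not an element of $\S(M\times M,L(E))$.

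For the explicit formula, I work out $\Op_\la(a)$ and $\Op_\la(b)$ as integral kernels, compose, then apply $\sigma_\la$. From $\sigma_\la|_\S = \F\circ \Ga_\la\circ M_{\mu_\la}$ we obtain $\Op_\la|_\S = M_{\mu_\la^{-1}}\circ \Ga_\la^{-1}\circ \ol\F$, which gives
\[
K_a(x',y') = \mu_\la^{-1}(x',y')\,\tau_{m_\la(x',y'),x'}\,\ol\F(a)\bigl(m_\la(x',y'),\xi_\la(x',y')\bigr)\,\tau_{m_\la(x',y'),y'}^{-1}.
\]
Compose the kernels via $K_{AB}(x',y') = \int_M K_a(x',y)K_b(y,y')\,d\mu(y)$ and then apply $\sigma_\la$. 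After setting $u:=\psi_x(\la\xi)$, $v:=\psi_x(-(1-\la)\xi)$ in the outer $\F\Ga_\la M_{\mu_\la}$, the three factors $\mu_\la(u,v)$, $\mu_\la^{-1}(u,y)$, $\mu_\la^{-1}(y,v)$ combine into $g^\la_{x,\xi,y}$. Expressing $\ol\F(a)$ and $\ol\F(b)$ as the advertised fiber integrals over $T^*_{y^\la_{x,\xi}}M$ and $T^*_{y^{1-\la}_{x,-\xi}}M$ introduces the measures $d\mu^*_{x,\xi,y}(\th,\th')$ and produces the phase $e^{2\pi i(\langle\th,\xi_\la(u,y)\rangle-\langle\th',\xi_\la(y,v)\rangle)}$. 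Combined with the outer oscillatory factor $e^{-2\pi i\langle\eta,\xi\rangle}$ and the identity $\xi_\la(y,v)=-\xi_{1-\la}(v,y)=\ol y^{1-\la}_{x,-\xi}$, this gives exactly $\om^\la_{x,\xi,y}(\eta,\th,\th')$. All the chained $\tau$-factors collapse since for scalar $L(E)=M\times\C$ the parallel transports cancel, confirming the formula as written.

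\textbf{Main obstacle.} The genuinely delicate step is the careful bookkeeping of measures and the justification that the chain of oscillatory integrals in the final formula converges absolutely (even though $\sigma_\la(\Op_\la(a)\Op_\la(b))$ is defined by a pairing of distributions). Convergence relies crucially on the Schwartz decay of $a$ and $b$ in the fiber variable and the $\O_M$-polynomial control on $\mu_\la$, $\psi$, and their derivatives. In the vector-bundle case (not needed here since the statement is for $\S(T^*M)=\S(T^*M,L(M\times\C))$ but useful for generalizations) the combinatorics of the parallel transports $\tau_{m_\la(\cdot,\cdot),\cdot}$ also require care to verify that they compose correctly into an endomorphism of $E_x$. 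Finally, noncommutativity of $\circ_\la$ follows immediately from the asymmetric roles of $y^\la_{x,\xi}$ and $y^{1-\la}_{x,-\xi}$ in the explicit formula.
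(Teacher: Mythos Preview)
Your proposal is correct and follows essentially the same route as the paper: transfer $\circ_\la$ via $\sigma_\la$ to the Volterra kernel product, pass to a frame to obtain the weighted composition $\int K(\rx,\rt)\,\mu_{z,\bfr}(\rt)\,K'(\rt,\ry)\,d\rt$ on $\S(\R^{2n})$, and prove continuity by direct seminorm estimates (the paper records the explicit bound $q_{p,(\a,\b)}(K\cdot K') \leq C\, q_{2(p+r),(\a,0)}(K)\,q_{p,(0,\b)}(K')$ with $|\mu_{z,\bfr}(\rt)|\leq C_1\langle\rt\rangle^{r-n-1}$, which is what your ``standard arguments'' amount to---no integration by parts is needed, only Peetre-type inequalities). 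Your derivation of the integral formula is considerably more detailed than the paper's, which simply states that it is ``obtained by computation of $\F\circ\Ga_\la\circ M_{\mu_\la}$'' applied to the Volterra product of the two kernels $M_{\mu_\la}^{-1}\circ\Ga_\la^{-1}\circ\ol\F(a)$ and $M_{\mu_\la}^{-1}\circ\Ga_\la^{-1}\circ\ol\F(b)$.
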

\begin{proof} The product $a\circ_\la b$ on $\S(T^*M)$ is obtained by computation of $\F\circ \Ga_\la\circ M_{\mu_\la}\circ \big( (M_{\mu_\la}^{-1}\circ \Ga_{\la}^{-1}\circ \ol \F(a))\circ_V (M_{\mu_\la}^{-1}\circ \Ga_\la^{-1}\circ \ol \F(b))\big)$, where $\circ_V$ is the Volterra product of kernels. Since $\sigma_\la$ is a topological isomorphism between $\S(M^2)$ and $\S(T^*M)$, the continuity of the Moyal product is equivalent to the continuity of  $\circ_V$, which is equivalent to the continuity of the following product on $\S(\R^{2n})$: 
$$
K\cdot K' (\rx,\ry):= \int_{\R^{n}} K(\rx,\rt)K(\rt,\ry) \mu_{z,\bfr}(\rt) d\rt.
$$
The continuity of this product is obtained by the following estimates
$$
q_{p,(\a,\b)}(K\cdot K') \leq C \, q_{2(p+r),(\a,0)}(K)\,q_{p,(0,\b)}(K'),\qquad  q_{p,\nu}(K):=\sup_{(\rx,\ry)\in \R^{2n}} \langle(\rx,\ry) \rangle^p |\del^\nu K (\rx,\ry)|
$$
where $|\mu_{z,\bfr}(\rt)|\leq C_1 \langle \rt \rangle^{r-n-1}$ and $C:=C_1\int_{\R^n} \langle\rt \rangle^{-(n+1)} d\rt$. 
\end{proof}

\begin{rem} $(\S(T^*M),\circ_W)$ is a $*$-algebra since $(a\circ _W b)^*= b^* \circ_W a^*$ for any $a,b\in \S(T^* M)$. We can also construct another $*$-algebra on $\S(T^*M)$ with the product $a \star b := \half( a\circ_0 b + a\circ_1 b)$. This proves that when $(H_\psi)$ (see Assumption \ref{assumH}) is not satisfied (so that no middle point exist in the classical world) we can still have a canonical star-product on $\S(T^*M)$ which satisfies $(a\star b)^*= b^* \star a^*$.
\end{rem}

\section{Symbol calculus of pseudodifferential operators}

\subsection{Symbols}

\begin{assum} Let $\sigma\in [0,1]$. We suppose in this section that $(M,\exp,E)$ has a $S_{\sigma}$-bounded geometry.   
\end{assum}

The algebra $\Re(\S)$ and $\Psi^{-\infty}$ are respectively too big and too small to develop a satisfactory pseudodifferential calculus that allows an efficient utilization of symbol maps. We shall in this section define some spaces of symbols that will be used to define later special algebras of pseudodifferential operators that lie between $\Re(\S)$ and $\Psi^{-\infty}$.

\begin{defn} A symbol of degree $(l,m)\in \R^2$ of type $\sigma$, on $M$ is a smooth section $a\in C^\infty(T^*M,L(E))$ such that for any $(z,\bfr)$ and any $n$-multi-indices $\a$, $\beta$, there exists $K>0$ such that
\begin{equation}\label{symbolIneq}
\norm{ \del^{(\a,\b)}_{z,\bfr} a^{z} (x,\th)}_{L(E_z)}\leq K \langle x\rangle^{\sigma(l- |\a|)}_{z,\bfr}\,\langle \th\rangle^{m-|\b|}_{z,\bfr,x}
\end{equation}
is valid for all $(x,\th)\in T^*M$. The space of symbols of degree $(l,m)$ and type $\sigma$ is noted $S^{l,m}_\sigma$. 
\end{defn}

Remark that $S^{l,m}_0$ is independent of $l$, so we denote this space $S^m_0$. We note $S^{-\infty}_\sigma:=\cap_{l,m} S^{l,m}_\sigma$ and in the case $\sigma>0$, we define $S^{-\infty}:= S^{-\infty}_\sigma=\S(T^*M,L(E))$ (it is independent of $\sigma>0$). We set $S^\infty_\sigma := \cup_{l,m} S^{l,m}_\sigma$. We define similarly $S^{l,m}_{\sg,z}:=S^{l,m}_\sigma(\R^{2n},L(E_z))$, without reference to a frame.  

Since $M$ has a $S_{\sigma}$-bounded geometry, we get the following coordinate independence of the previous definition:

\begin{prop} Let $a\in C^\infty(T^*M,L(E))$. Then $a \in S^{l,m}_\sigma$ if and only if there exists a frame $(z,\bfr)$ such that 
$a$ satisfies (\ref{symbolIneq}).
\end{prop}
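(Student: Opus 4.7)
The plan is as follows. The "only if" direction is immediate from the definition of $S^{l,m}_\sigma$, so the content is in the "if" direction: given that (\ref{symbolIneq}) holds in one frame $(z,\bfr)$, I want to derive the analogous estimate in an arbitrary frame $(z',\bfr')$. The strategy is to relate the two via the identity
\[
a^{z'} \,=\, \bigl((\tau_{z'}^{-1}\tau_z)\circ \pi\bigr) \cdot a^z \cdot \bigl((\tau_z^{-1}\tau_{z'})\circ \pi\bigr),
\]
where $\pi : T^*M \to M$ is the projection, and then to combine three ingredients: Leibniz rule for the triple product, the change-of-frame formula from Lemma \ref{cchange}(i) applied to $a^z$, and the bound $(S_\sigma 2)$ applied to the transition maps $\tau_{z'}^{-1}\tau_z$ and $\tau_z^{-1}\tau_{z'}$.

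First I would apply $\del_{z',\bfr'}^{(\a',\b')}$ to the displayed product and expand by Leibniz, obtaining a finite sum of terms of the form
\[
\del_{z',\bfr'}^{\nu_1}\bigl((\tau_{z'}^{-1}\tau_z)\circ \pi\bigr)\,\cdot\,\del_{z',\bfr'}^{\nu_2}(a^z)\,\cdot\,\del_{z',\bfr'}^{\nu_3}\bigl((\tau_z^{-1}\tau_{z'})\circ \pi\bigr),\qquad \nu_1+\nu_2+\nu_3=(\a',\b').
\]
Write $\nu_i=(\a_i,\b_i)$. Since $\pi$ does not involve the cotangent variable, the first and third factors vanish unless $\b_1=\b_3=0$, which forces $\b_2=\b'$. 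For the nonvanishing terms, $(S_\sigma 2)$ (transferred from the frame-free formulation to the coordinate frame $(z',\bfr')$, which is legal because $(M,\exp,E)$ has $S_\sigma$-bounded geometry) gives constants $C_{\a_1},C_{\a_3}>0$ with
\[
\bigl\|\del_{z',\bfr'}^{(\a_1,0)}\bigl((\tau_{z'}^{-1}\tau_z)\circ\pi\bigr)(x,\th)\bigr\|\,\le\,C_{\a_1}\,\langle x\rangle_{z',\bfr'}^{-\sigma|\a_1|},\qquad\bigl\|\del_{z',\bfr'}^{(\a_3,0)}\bigl((\tau_z^{-1}\tau_{z'})\circ\pi\bigr)(x,\th)\bigr\|\,\le\,C_{\a_3}\,\langle x\rangle_{z',\bfr'}^{-\sigma|\a_3|}.
\]

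Next I would apply Lemma \ref{cchange}(i) to the middle factor: it expresses $\del_{z',\bfr'}^{(\a_2,\b')}$ as a finite linear combination $\sum_{|\b_2'|\ge|\b'|,\,|(\a_2',\b_2')|\le|(\a_2,\b')|} f_{\a_2,\b',\a_2',\b_2'}\,\del_{z,\bfr}^{(\a_2',\b_2')}$ with
\[
|f_{\a_2,\b',\a_2',\b_2'}(x,\th)|\;\le\;C\,\langle x\rangle_{z,\bfr}^{\sigma(|\a_2'|-|\a_2|)}\,\langle\th\rangle_{z,\bfr,x}^{|\b_2'|-|\b'|}.
\]
The hypothesis on $(z,\bfr)$ bounds each $\|\del_{z,\bfr}^{(\a_2',\b_2')}a^z(x,\th)\|$ by a constant times $\langle x\rangle_{z,\bfr}^{\sigma(l-|\a_2'|)}\langle\th\rangle_{z,\bfr,x}^{m-|\b_2'|}$. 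Multiplying the two estimates, the $|\a_2'|$ and $|\b_2'|$ powers cancel, leaving the $\langle x\rangle_{z,\bfr}$-exponent $\sigma(l-|\a_2|)$ and the $\langle\th\rangle_{z,\bfr,x}$-exponent $m-|\b'|$.

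Finally I would use Lemma \ref{B1-lem}(i) to replace $\langle x\rangle_{z,\bfr}\asymp\langle x\rangle_{z',\bfr'}$ and $\langle\th\rangle_{z,\bfr,x}\asymp\langle\th\rangle_{z',\bfr',x}$ (both equivalences valid under $S_\sigma$-bounded geometry), at the cost of absorbing the implied constants. Summing the three exponents coming from $\tau$-factors, $a^z$-factor, and the coordinate change $f_{\a_2,\b',\a_2',\b_2'}$ yields a total $\langle x\rangle_{z',\bfr'}$-exponent of $\sigma(l-|\a_2|)-\sigma|\a_1|-\sigma|\a_3|=\sigma(l-|\a'|)$ and a $\langle\th\rangle_{z',\bfr',x}$-exponent of $m-|\b'|$, which is exactly (\ref{symbolIneq}) in the frame $(z',\bfr')$. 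Summing over the finitely many terms in the Leibniz expansion completes the proof.

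The main obstacle is purely bookkeeping: ensuring that the three sources of $\langle x\rangle$-decay (from $(S_\sigma 2)$ on each of the two $\tau$-factors, and from the change-of-frame coefficients $f_{\a_2,\b',\a_2',\b_2'}$ combined with the symbol estimate on $a^z$) combine to yield precisely the exponent $\sigma(l-|\a'|)$, and that the $\langle\th\rangle$ exponents, which only enter through the middle factor, telescope to $m-|\b'|$. Once the constraint $\b_1=\b_3=0$ is observed, the combinatorial identity $|\a_1|+|\a_2|+|\a_3|=|\a'|$ makes the collapse automatic; no deeper ingredient than Lemmas \ref{B1-lem}, \ref{cchange}, and $(S_\sigma 2)$ is required.
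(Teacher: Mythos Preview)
Your proof is correct and follows essentially the same approach as the paper: the paper invokes the identity \eqref{del-cchange} (already derived in the proof of the Schwartz-space proposition by exactly your combination of Leibniz rule, Lemma~\ref{cchange}(i), and $(S_\sigma 2)$), then applies the symbol estimate in the known frame together with Lemma~\ref{B1-lem}(i) to switch between $\langle x\rangle_{z,\bfr}$, $\langle\th\rangle_{z,\bfr,x}$ and their primed counterparts. Your argument simply re-derives \eqref{del-cchange} inline; the exponent bookkeeping and the use of the equivalences $\langle x\rangle_{z,\bfr}\asymp\langle x\rangle_{z',\bfr'}$, $\langle\th\rangle_{z,\bfr,x}\asymp\langle\th\rangle_{z',\bfr',x}$ are identical.
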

\begin{proof} 
Suppose that (\ref{symbolIneq}) is satisfied for $(z',\bfr')$ and let ($z,\bfr$) be another frame. For $(x,\th)\in T^*M$ and $\a,\b$ two $n$-multi-indices with $\nu=(\a,\b)\neq 0$, we get from Equation (\ref{del-cchange}) and Lemma \ref{cchange},
\begin{align*} 
\norm{\del_{z,\bfr}^{\nu} a^z(x,\th)}_{L(E_z)}  \leq K \sum_{\a',\b'} \sum_{\rho\leq \la \leq \nu'}\ 
\langle x\rangle_{z,\bfr}^{\sigma(|\a'|-|\a|)}\langle\th \rangle_{z,\bfr,x}^{|\b'|-|\b|} \langle x \rangle_{z',\bfr'}^{\sigma(|\la^1|-|\a'|)}\\
\hspace{3cm} \times \langle x\rangle_{z',\bfr'}^{\sigma(l-|\rho^1|)}\langle \th \rangle_{z',\bfr',x}^{m-|\b'|}\langle x \rangle_{z',\bfr'}^{\sigma(|\rho^1|-|\la^1|)}.
\end{align*} 
Using (\ref{i-first}), (\ref{i-bis}) and the fact that $|\a|\geq |\rho^1|$, we get the result.
\end{proof} 

\begin{cly} If $a\in C^\infty(T^*M,L(E))$, then $a\in S^{l,m}_\sigma$ if and only if for any $(z,\bfr)$, $a^z\circ (n^\bfr_{z,*})^{-1} \in S^{l,m}_\sigma(\R^{2n},L(E_z))$, or equivalently, there exists $(z,\bfr)$ such that $a^z\circ (n^\bfr_{z,*})^{-1} \in S^{l,m}_\sigma(\R^{2n},L(E_z))$.
\end{cly}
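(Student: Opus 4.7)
The plan is to reduce the corollary to a direct translation of definitions, combined with the preceding proposition. The main observation is that in a fixed frame $(z,\bfr)$, the defining inequality (\ref{symbolIneq}) is quite literally the symbol condition on $\R^{2n}$ for the transferred function $\tilde a := a^z\circ (n^\bfr_{z,*})^{-1}$, so there is essentially nothing to compute.

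First, I would set $(\rx,\vth):=n^\bfr_{z,*}(x,\th)=(n_z^\bfr(x),\wt M_{z,x}^\bfr(\th))$ and recall from the definitions that
\[
\langle x\rangle_{z,\bfr}=\langle \rx\rangle,\qquad \langle\th\rangle_{z,\bfr,x}=\langle\vth\rangle.
\]
Next, by definition of the vector fields $(\del_{i,z,\bfr})_{i\in\N_{2n}}$ as the coordinate vector fields of the chart $n^\bfr_{z,*}$, for every $2n$-multi-index $(\a,\b)$ one has
\[
\del^{(\a,\b)}_{z,\bfr}(a^z)=\bigl(\del^{(\a,\b)}\tilde a\bigr)\circ n^\bfr_{z,*}.
\]
Substituting these three identities into (\ref{symbolIneq}) shows that (\ref{symbolIneq}) holds at $(z,\bfr)$ if and only if, for all $(\a,\b)$, there exists $K>0$ such that
\[
\bigl\lVert \del^{(\a,\b)}\tilde a(\rx,\vth)\bigr\rVert_{L(E_z)}\le K\,\langle\rx\rangle^{\sigma(l-|\a|)}\langle\vth\rangle^{m-|\b|},\qquad \forall\,(\rx,\vth)\in\R^{2n},
\]
which is precisely the condition $\tilde a\in S^{l,m}_{\sg,z}$.

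Finally, I would invoke the preceding proposition, which states that $a\in S^{l,m}_\sigma$ is equivalent to the existence of some frame $(z,\bfr)$ satisfying (\ref{symbolIneq}), and is therefore automatically equivalent to (\ref{symbolIneq}) holding at every frame. Combining this with the equivalence obtained above yields the triple equivalence of the three conditions: $a\in S^{l,m}_\sigma$, $a^z\circ (n^\bfr_{z,*})^{-1}\in S^{l,m}_{\sg,z}$ for all $(z,\bfr)$, and $a^z\circ (n^\bfr_{z,*})^{-1}\in S^{l,m}_{\sg,z}$ for some $(z,\bfr)$. There is no real obstacle here: the corollary is a formal consequence of the preceding proposition once the chart identifications are spelled out, and the only substantive content (the $S_\sigma$-bounded geometry used to pass between frames) has already been absorbed in the proof of that proposition.
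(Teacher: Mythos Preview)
Your proposal is correct and matches the paper's intent: the corollary is stated without proof there, as an immediate consequence of the preceding proposition once one unwinds the chart identifications exactly as you do. There is nothing to add.
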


We see that $S^{l,m}_\sigma \cdot S^{l',m'}_{\sigma} \subseteq S^{l+l',m+m'}_{\sigma}$ where $\cdot$ is the composition of sections induced by the matrix product on the fibers of $L(E)$. Moreover, $S^{l,m}_\sigma \subseteq S^{l',m'}_{\sigma}$ for $m\leq m'$ and $l\leq l'$. Thus, $S^{\infty}_\sigma$ is a $*$-algebra, which is bigraduated for $\sigma>0$ and graduated for $\sigma=0$. Remark also that $S^{-\infty}\cdot S^{m}_0$ and $S^{m}_0\cdot S^{-\infty}$ are included in $S^{-\infty}$. Note that if $f\in S^{l,m}_\sigma(T^*M)$ (a symbol where $M$ has its trivial bundle $M\times \C$), then $a_f(x,\th):=f(x,\th)I_{L(E_x)}$ defines a symbol in $S^{l,m}_\sigma$. Such symbols will be called scalar symbols.
Note also that if $a\in S^{l,m}_\sigma$, then $\del_{z,\bfr}^{(\a,\b)}a:=(\tau_z\circ \pi )(\del^{(\a,\b)}_{z,\bfr} a^z)(\tau_z^{-1}\circ\pi)\in S^{l-|\a|,m-|\b|}_{\sigma}$.

If $f\in S_\sigma(\R^n)$ then $(\rx,\vth)\mapsto f(\rx) \Id_{L(E_z)} \in S_\sigma^{0,0}(\R^n,L(E_z))$. In particular  $(\rx,\vth)\mapsto \mu_{z,\bfr}^{\pm 1}(\rx) \Id_{L(E_z)} \in S_\sigma^{0,0}(\R^n,L(E_z))$ if $d\mu$ is a $S_\sigma^\times$-density.

\begin{rem} We note $PS^{l,m}_\sigma(\R^{2n},L(E_z))$ the subspace of $S^{l,m}_\sigma(\R^{2n},L(E_z))$ consisting of functions of the form $\sum_{1\leq i\leq (\dim E_z)^2} P_i e_i$ where $(e_i)$ is a linear basis of $L(E_z)$ and $P_i$ are of the form $\sum_{\beta} c_{i,\b}(\rx)\vth^{\b} $ (finite sum over the $n$-multi-indices $\b$), where for any $i,\b$, $\del^\a c_{i,\b}(\rx)=\O(\langle \rx\rangle^{\sigma(l-|\a|)})$ for any $n$-multi-indices $\a$, and $m=\max_i \deg_\vth P_i$. We check that this definition is independent of the chosen basis $(e_i)$. 

We call polynomial symbol of order $l,m$ and type $\sigma$ any section of the form  $(\tau_{z}\circ \pi) (P\circ n_{z,*}^{\bfr})(\tau_z^{-1}\circ\pi)$ where $P\in PS^{l,m}_\sigma(\R^{2n},L(E_z))$ and $(z,\bfr)$ is a frame. This definition is independent of $(z,\bfr)$. We note $PS^{l,m}_\sigma$ the subspace of $S^{l,m}_\sigma$ consisting of polynomial symbols of order $l,m$ and type $\sigma$. Remark that the section $I:(x,\th)\mapsto I_{L(E_x)}$ 
is in $PS^{0,0}_1$.
\end{rem}

We now topologize the symbol spaces:

\begin{lem}
\label{toposymbol}
The following semi-norms on $S_\sigma^{l,m}$, for $N\in \N$, 
$$
q_{(\a,\b)}(a):= \sup_{(x,\th)\in T^*M} \langle x \rangle_{z,\bfr}^{\sigma(|\a|-l)} \langle \th \rangle_{z,\bfr,x}^{|\b|-m}\norm{\del^{(\a,\b)}_{z,\bfr}a^z(x,\th)}_{L(E_z)}
$$
determine a Fr\'{e}chet topology on $S_\sigma^{l,m}$, which is independent of $(z,\bfr)$.
The applications $T_{z,\bfr,*}$ are topological isomorphisms from $S_\sigma^{l,m}$ onto $S_\sigma^{l,m}(\R^{2n},L(E_z))$. The following inclusions are continous for these topologies: $S^{l,m}_\sigma \cdot S^{l',m'}_{\sigma} \subseteq S^{l+l',m+m'}_{\sigma}$, $S^{l,m}_\sigma \subseteq S^{l',m'}_{\sigma}$ ($m\leq m'$ and $l\leq l'$) and $S^{-\infty}_\sigma\subseteq S^{l,m}_\sigma$. Moreover, the last inclusion is dense when  $S^{l,m}_\sigma$ has the topology of $S^{l',m'}_\sigma$ for $m< m'$ and $l<l'$.
\end{lem}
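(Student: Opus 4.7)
The plan is to transfer the problem to $\R^{2n}$ via the maps $T_{z,\bfr,*}$, where all statements reduce to facts about the classical symbol spaces $S^{l,m}_{\sigma,z}:=S^{l,m}_\sigma(\R^{2n},L(E_z))$, and then use the $S_\sigma$-bounded geometry to establish frame-independence. First, by definition of $T_{z,\bfr,*}$ and the equality $\del^{(\a,\b)}_{z,\bfr} a^z = \bigl(\del^{(\a,\b)} T_{z,\bfr,*}(a)\bigr)\circ n^\bfr_{z,*}$, the seminorms $q_{(\a,\b)}$ coincide with the pullback via $T_{z,\bfr,*}$ of the standard seminorms on $S^{l,m}_{\sigma,z}$. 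Since the latter form a countable, separating family defining a Fréchet topology on $S^{l,m}_{\sigma,z}$ (this is classical, and completeness follows by the usual argument that a Cauchy sequence converges in $C^\infty$ by the Arzelà–Ascoli/locally uniform bounds, and the limit inherits the required estimates), $T_{z,\bfr,*}$ is by construction a topological isomorphism onto $S^{l,m}_{\sigma,z}$.

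Second, for the independence on $(z,\bfr)$, I apply Lemma \ref{cchange} (i) combined with Leibniz rule, exactly as in the proof of the foregoing proposition characterizing $S^{l,m}_\sigma$. The estimate
\begin{equation*}
\norm{\del^{(\a,\b)}_{z,\bfr} a^z(x,\th)}_{L(E_z)} \leq C_{\a,\b} \sum_{|\a'|+|\b'|\leq |\a|+|\b|,\ |\b'|\geq |\b|} \langle x \rangle^{\sigma(|\a|-l)}_{z,\bfr}\langle\th\rangle^{|\b|-m}_{z,\bfr,x}\, q'_{(\a',\b')}(a),
\end{equation*}
where the $q'_{(\a',\b')}$ are the seminorms in the frame $(z',\bfr')$, follows from the coefficients $f_{\a,\b,\a',\b'}$ of Lemma \ref{cchange} together with the parallel-transport estimate $(S_\sigma 2)$ applied to $\tau_z^{-1}\tau_{z'}$ and its derivatives, and the equivalences \eqref{i-first}, \eqref{i-bis}. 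This shows that the two families of seminorms are equivalent, giving the same Fréchet topology.

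Third, continuity of the product $S^{l,m}_\sigma\cdot S^{l',m'}_\sigma \subseteq S^{l+l',m+m'}_\sigma$ is a straightforward Leibniz-rule computation in a fixed frame, bounding $q_{(\a,\b)}(ab)$ by a finite sum of products $q_{(\a_1,\b_1)}(a)\,q_{(\a_2,\b_2)}(b)$ with $\a_1+\a_2=\a$, $\b_1+\b_2=\b$. For the inclusions $S^{l,m}_\sigma \subseteq S^{l',m'}_\sigma$ with $l\leq l'$, $m\leq m'$, we simply have $\langle x\rangle^{\sigma(|\a|-l')}_{z,\bfr}\langle\th\rangle^{|\b|-m'}_{z,\bfr,x} \leq \langle x \rangle^{\sigma(|\a|-l)}_{z,\bfr}\langle\th\rangle^{|\b|-m}_{z,\bfr,x}$ (using $\sigma\geq 0$), which gives continuity at once, and the inclusion $S^{-\infty}_\sigma \subseteq S^{l,m}_\sigma$ follows by the same comparison.

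The main obstacle is the density assertion, which requires exhibiting an explicit approximation. I would choose a cutoff $\chi\in C^\infty_c(\R^{2n})$ with $\chi=1$ near $0$, and work in a fixed frame $(z,\bfr)$ to set $a_k := \bigl[(\rx,\vth)\mapsto \chi(\rx/k,\vth/k)\,T_{z,\bfr,*}(a)(\rx,\vth)\bigr]$ pulled back to $S^{-\infty}_\sigma$. Each $a_k$ is compactly supported in $(\rx,\vth)$, hence lies in $S^{-\infty}_\sigma$. Convergence $a_k\to a$ in the coarser topology of $S^{l',m'}_\sigma$ when $l<l'$ and $m<m'$ reduces, by Leibniz rule, to showing that for any derivation $\del^{(\a,\b)}$ the quantity $\langle \rx\rangle^{\sigma(|\a|-l')}\langle \vth\rangle^{|\b|-m'}\norm{\del^{(\a,\b)}(a_k-T_{z,\bfr,*}(a))(\rx,\vth)}$ tends to zero uniformly in $(\rx,\vth)$. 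Outside a neighborhood of $0$ one splits into the region where $(\rx/k,\vth/k)\in\supp\chi$ (where derivatives of $\chi(\cdot/k)$ carry negative powers of $k$ and hit the ambient symbol estimate with room to spare thanks to the strict gain $l'-l>0$, $m'-m>0$) and the region where $\chi(\rx/k,\vth/k)=0$ (where $\langle \rx\rangle$ or $\langle\vth\rangle \gtrsim k$ produces a vanishing factor $k^{-\sigma(l'-l)}$ or $k^{-(m'-m)}$). This classical argument yields $a_k\to a$ in $S^{l',m'}_\sigma$, completing the proof.
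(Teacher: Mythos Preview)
Your treatment of the Fr\'echet topology, frame independence, and the continuity of the various inclusions and products is essentially the same as the paper's, and is correct.

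There is, however, a genuine gap in your density argument at $\sigma=0$. You use a joint cutoff $\chi(\rx/k,\vth/k)$ with $\chi\in C^\infty_c(\R^{2n})$, and claim that in the region where $\chi$ vanishes ``$\langle\rx\rangle$ or $\langle\vth\rangle\gtrsim k$ produces a vanishing factor $k^{-\sigma(l'-l)}$ or $k^{-(m'-m)}$''. When $\sigma=0$ the first factor is $k^0=1$, so in the set where $|\rx|\gtrsim k$ but $\vth$ stays bounded you get no decay at all. Concretely, take $a\equiv 1\in S^{0}_{0}$ and the seminorm with $\a=\b=0$ in $S^{m'}_0$ for some $m'>0$: on $\{|\rx|\geq 2k,\ \vth=0\}$ one has $|1-\chi(\rx/k,0)|\langle\vth\rangle^{-m'}=1$, and the sequence does not converge.

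The paper avoids this by using a \emph{product} cutoff that depends on $\sigma$, namely
\[
a_p(\rx,\vth):=\bigl(\rho(\rx/p)\bigr)^{1-\delta_{\sigma,0}}\,\rho(\vth/p)\,a(\rx,\vth),
\]
with $\rho\in C^\infty_c(\R^n)$. When $\sigma=0$ one only truncates in $\vth$; the resulting $a_p$ still lies in $S^{-\infty}_0=\cap_m S^m_0$ because rapid decay in $\vth$ (with $\rx$-derivatives merely bounded) is exactly what $S^{-\infty}_0$ requires. With this modification your remaining Leibniz-type estimates go through: for $\sigma=0$ only the $\vth$-region analysis is needed, while for $\sigma>0$ your joint-cutoff reasoning (or equivalently the product one) gives the factor $\langle p\rangle^{\max\{m-m',\sigma(l-l')\}}\to 0$.
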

\begin{proof} The independence of the topology for $(z,\bfr)$ follows from the easily checked estimate for any $(\a,\b)$,
$$
q_{(\a,\b)}^{(z,\bfr)}(a) \leq K_{\a,\b} \underset{|\b'|\geq |\b|\,,\ga\leq \a'}{\sum_{0\leq |(\a',\b')|\leq |(\a,\b)|}} q_{(\ga,\b')}^{(z',\bfr')}(a) .
$$ 
where $K_{\a,\b}>0$. By construction the applications $T_{z,\bfr,*}$ are clearly topological isomorphisms from $S_\sigma^{l,m}$ onto $S_\sigma^{l,m}(\R^{2n},L(E_z))$. The continuity of $S^{l,m}_\sigma \cdot S^{l',m'}_{\sigma} \subseteq S^{l+l',m+m'}_{\sigma}$, $S^{l,m}_\sigma \subseteq S^{l',m'}_{\sigma}$ ($m\leq m'$ and $l\leq l'$) and $S^{-\infty}_\sigma\subseteq S^{l,m}_\sigma$ are straightforward. Following \cite{Melrose}, to prove the density result, we shall prove the stronger property: for any $a\in S^{l,m}_{\sigma}(\R^{2n},L(E_z))$ the sequence 
$$
a_p(\rx,\vth):= (\rho(\rx/p))^{1-\delta_{\sigma,0}}\, \rho(\vth/p)\, a(\rx,\vth) 
$$
converges to $a$ for the topology of $S^{l',m'}_{\sigma}(\R^{2n},L(E_z))$ where $m'>m$ and $l'>l$. Here $\rho\in C^\infty_c(\R^{n},[0,1])$ with $\rho=1$ on $B(0,1)$ and $\rho=0$ on $\R^{n}\backslash B(0,2)$. First, it is clear that $a_p \in S^{-\infty}_\sigma(\R^{2n},L(E_z))$. Noting $R_p(\rx,\vth):=  \langle \rx \rangle^{\sigma(|\a|-l')} \langle \vth \rangle^{|\b|-m'}\norm{\del^{(\a,\b)}(a-a_p)(\rx,\vth)}_{L(E_z)}$ for a given $2n$-multi-index $\nu:=(\a,\b)$, we get with Leibniz rule, for a $K>0$ (by convention $\nu'<\nu$ if and only if $\nu'\leq \nu$ and $\nu'\neq \nu$):
$$
\tfrac{1}{K}\,R_p (\rx,\vth)\leq \Delta_p(\rx,\vth) \langle \rx \rangle^{\sigma(l-l')} \langle \vth \rangle^{m-m'}+ \sum_{\nu'< \nu}   |\del^{\nu-\nu'}\Delta_p(\rx,\vth)|\langle \rx \rangle^{\sigma(l-l'+|\a|-|\a'|)} \langle \vth \rangle^{m-m'+|\b|-|\b'|}
$$
where $\Delta_p(\rx,\vth):=1-(\rho(\rx/p))^{1-\delta_{\sigma,0}} \rho(\vth/p)$. Suppose that $\sigma=0$. In that case, 
$|\Delta_p(\rx,\vth)|\leq 1_{[p,+\infty[}(\vth)$ and if $\nu'<\nu$, 
\begin{equation}
\label{Deltasigma0}
|\del^{\nu-\nu'}\Delta_p(\rx,\vth)|\leq \delta_{\a,\a'}\, K_{\b}\, p^{-|\b|+|\b'|}\, 1_{[p,2p]}(\vth)
\end{equation}
where $1_{[r,r']}$ is the characteristic function of the annulus $A_{r,r'}:=\set{\vth\in \R^{n} \ : \ r\leq \norm{\vth} \leq r'}$ and $K_{\b}:=\sup_{\b'<\b}\norm{\del^{\b-\b'} \rho}_\infty$. As a consequence, for $K'>0$,
$$
\tfrac{1}{K}\,R_p (\rx,\vth)\leq \langle p \rangle^{m-m'} +K_\b \sum_{\nu'< \nu} \delta_{\a,\a'}\, 1_{[p,2p]}(\vth)\ p^{-|\b|+|\b'|}\, \langle \vth \rangle^{m-m'+|\b|-|\b'|} \leq K' \langle p \rangle^{m-m'}
$$ 
and the result follows. Suppose now $\sigma\neq 0$. In that case $|\Delta_p(\rx,\vth)|\leq 1_{F_p}(\rx,\vth)$ where $F_p:=\R^{2n}-B(0,p)^2$ and if $\nu'<\nu$, for a constant $K_\nu>0$
\begin{equation}
\label{Deltasigma1}
|\del^{\nu-\nu'}\Delta_p(\rx,\vth)|\leq K_\nu\, 1_{[\sgn(\a-\a')p,2p]}(\rx)\  1_{[\sgn(\b-\b')p,2p]}(\vth)\, p^{-|\nu|+|\nu'|}\, .
\end{equation}
As a consequence, for $K',K''>0$, and with $r:=\max \{m-m',\sigma(l-l')\}<0$,
$$
\tfrac{1}{K}\,R_p (\rx,\vth)\leq \langle p \rangle^{r} +K'\sum_{\nu'< \nu}  1_{[\sgn(\a-\a')p,2p]}(\rx)\  1_{[\sgn(\b-\b')p,2p]}(\vth)\, \langle \rx \rangle^{\sigma(l-l')}\langle \vth \rangle^{m-m'} \leq K'' \langle p \rangle^{r}
$$ 
and the result follows.
\end{proof}

Note that $S^{-\infty}:=\cap_{l,m} S^{-\infty}_{\sigma>0}=\S(T^*M,L(E))$ and the equality is also valid for the topologies.
The following lemma shows that the symbols of $S^{l,m}_\sigma$ are tempered distributional sections on $T^*M$.

\begin{lem}
\label{slmdistr}
The application 
$j_{T^*M}$ is injective and continuous from $S^{l,m}_\sigma$ into $\S'(T^*M,L(E))$.
\end{lem}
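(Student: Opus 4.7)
The plan is to reduce everything to a frame $(z,\bfr)$ via the topological isomorphism $T_{z,\bfr,*}$, where the computation becomes a standard Schwartz-vs-polynomially-bounded pairing on $\R^{2n}$. Recall that in this frame, the Liouville measure $d\mu^*$ pulls back to Lebesgue measure $d\rx\, d\vth$ on $\R^{2n}$, and the pointwise trace on $L(E)$ is invariant under conjugation by the unitary parallel transport $\tau_z$. Setting $A := T_{z,\bfr,*}(a) \in S^{l,m}_\sigma(\R^{2n},L(E_z))$ for $a \in S^{l,m}_\sigma$ and $B := T_{z,\bfr,*}(b) \in \S(\R^{2n},L(E_z))$ for $b \in \S(T^*M,L(E))$, the defining pairing becomes
\[
\langle j_{T^*M}(a), b \rangle \;=\; \int_{\R^{2n}} \Tr\!\bigl(A(\rx,\vth)\, B(\rx,\vth)^*\bigr)\, d\rx\, d\vth.
\]
The symbol estimate at order $(0,0)$ gives $\norm{A(\rx,\vth)}_{L(E_z)} \leq q_{(0,0)}(A)\,\langle \rx\rangle^{\sigma l}\langle \vth\rangle^m$, which is dominated by $q_{(0,0)}(A)\,\langle(\rx,\vth)\rangle^{|\sigma l|+|m|}$. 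Pairing this with a standard Schwartz estimate $\norm{B(\rx,\vth)}_{L(E_z)} \leq C_N\,Q_N(B)\,\langle(\rx,\vth)\rangle^{-N}$ (where $Q_N$ is a suitable Schwartz seminorm) and choosing $N > |\sigma l|+|m|+2n+1$, I obtain absolute convergence together with the key estimate
\[
|\langle j_{T^*M}(a), b \rangle| \;\leq\; C\, q_{(0,0)}(A)\, Q_N(B),
\]
where $C$ depends only on $n,l,m,\sigma,N$.

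This single estimate yields both halves of the continuity claim. First, it shows $j_{T^*M}(a)$ is a continuous antilinear form on $\S(T^*M,L(E))$, so lies in $\S'(T^*M,L(E))$; and second, for any bounded subset $\mathcal{B} \subset \S(T^*M,L(E))$, the Schwartz seminorm $Q_N$ is bounded on $T_{z,\bfr,*}(\mathcal{B})$, whence $\sup_{b \in \mathcal{B}} |\langle j_{T^*M}(a), b\rangle| \leq C'\, q_{(0,0)}^{(z,\bfr)}(a)$, which is exactly the continuity of $j_{T^*M}$ from $S^{l,m}_\sigma$ into $\S'(T^*M,L(E))$ equipped with the strong topology. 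For injectivity, suppose $j_{T^*M}(a) = 0$. Since the continuous inclusion $C_c^\infty(T^*M, L(E)) \hookrightarrow \S(T^*M, L(E))$ recorded in the diagram after the definition of the Schwartz space holds here as well, the pairing vanishes against all compactly supported test sections; transferred to $\R^{2n}$ this reads $\int_{\R^{2n}} \Tr(A\, C^*)\, d\rx\, d\vth = 0$ for every $C \in C_c^\infty(\R^{2n},L(E_z))$. Taking $C := \varphi\, E_{ij}$ with $\varphi \in C_c^\infty(\R^{2n})$ and $(E_{ij})$ a basis of matrix units of $L(E_z)$, the du Bois--Reymond lemma applied to the smooth function $\rx,\vth \mapsto \overline{A_{ji}(\rx,\vth)}$ forces $A \equiv 0$, hence $a = 0$.

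There is no serious obstacle: the proof is essentially a routine verification once the transfer to a frame is made. The only points that need care are the unitarity of $\tau_z$ (which guarantees $\Tr(ab^*)^z = \Tr(A B^*)$ and absorbs all transport factors), the identification of $d\mu^*$ with Lebesgue measure in the chart $n_{z,*}^\bfr$, and choosing $N$ large enough to absorb simultaneously the symbol growth $\langle \rx\rangle^{\sigma l}\langle \vth\rangle^m$ and the $2n$-dimensional integration measure. The frame independence of the final conclusion is automatic from the $S_\sigma$-bounded geometry, already built into the definition of $S^{l,m}_\sigma$.
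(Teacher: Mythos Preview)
Your proof is correct and follows essentially the same approach as the paper: reduce to a frame via $T_{z,\bfr,*}$, where $j_{T^*M}$ becomes the standard inclusion $j$ on $\R^{2n}$ (the commutative diagram following the definition of $j_{T^*M}$ has $M_{\mu^2}$ replaced by the identity in the $T^*M$ case), and then exploit the polynomial growth of symbols against Schwartz decay. The only stylistic difference is that the paper factors the argument through the intermediate space $\O_M(\R^{2n},L(E_z))$, checking the continuous inclusion $S^{l,m}_\sigma(\R^{2n},L(E_z)) \hookrightarrow \O_M(\R^{2n},L(E_z))$ via the seminorm estimate $\sup \norm{\varphi\, \del^\nu a} \leq K_{\varphi,\nu}\, q_\nu(a)$ and then invoking the known continuous injection $\O_M \hookrightarrow \S'$, whereas you do the direct integral estimate at order $(0,0)$; your route is marginally more elementary while the paper's buys a slightly cleaner structural picture. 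Injectivity is implicit in the paper's diagram (each arrow is injective), and your explicit treatment via compactly supported test sections and matrix units is a fine way to spell it out.
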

\begin{proof} Since we have the following commutative diagram
\[\xymatrix{
    S^{l,m}_\sigma \ar[rr]^{j_{T^*M}} \ar[d]_{T_{z,\bfr,*}} &&  \S'(T^*M,L(E))  \\
    S^{l,m}_\sigma(\R^{2n},L(E_z))  \ar[r]_i  &\O_M(\R^{2n},L(E_z)) \ar[r]_j & \S'(\R^{2n},L(E_z)) \ar[u]_{T^*_{z,\bfr,*}}   \\
  }\]
where $T^*_{z,\bfr,*}$ is the adjoint of $T_{z,\bfr,*}$ on $\S(T^*M,L(E))$ and $\O_M(\R^{2n},L(E_z))$ is the locally convex complete Hausdorff space of $L(E_z)$-valued functions on $\R^{2n}$ with polynomially bounded derivatives, it is sufficient to check that the natural injection $i$ is continuous from $S^{l,m}_\sigma(\R^{2n},L(E_z))$ into $\O_M(\R^{2n},L(E_z))$. This is obtained by the following estimate, for any $\varphi\in \S(\R^{2n})$ and $\nu=(\a,\b)$ $2n$-multi-index,
$$
\sup_{(\rx,\vth)\in \R^{2n}} \norm{ \varphi\, \del^{\nu} a (\rx,\vth)}_{L(E_z)} \leq K_{\varphi,\nu}\, q_{\nu}(a) 
$$
where $K_{\varphi,\nu}:=\sup_{(\rx,\vth)\in \R^{2n}} |\varphi(\rx,\vth)\langle \rx \rangle^{\sigma(l-|\a|)} \langle \vth \rangle^{m-|\b|}|$.
\end{proof}

\begin{defn} Let $(a_j)_{j\in \N^*}$ be a sequence in $S^{l_j,m_j}_\sigma$ where $(l_j)$ and $(m_j)$ are real strictly decreasing sequences such that $\lim_{j\to \infty} l_j= \lim_{j\to \infty} m_j = -\infty$. We say that $a$ is an asymptotic expansion of $(a_j)_{j\in \N^*}$ and we denote
$$
a \sim \sum_{j=1}^\infty a_j
$$
if $a\in C^{\infty}(T^*M,L(E))$ is such that $a - \sum_{j=1}^{k-1} a_j \in S^{l_k,m_k}_{\sigma}$ for any $k\in \N$ with $k\geq 2$. In particular, we have $a\in S^{l_1,m_1}_\sigma$.
\end{defn}

We need asymptotic summation of symbols modulo $S^{-\infty}_\sigma$. The following result of asymptotic completeness is based on a classical method \cite{Shubin} of approximation of series by weightening summands $a_j(\rx,\th)$ with functions which ``cut'' a neighborhood of zero in the domain of $x$ (if $\sigma\neq0$) and $\th$. The idea is that the part we cut is bigger and bigger when $j\to\infty$ so that convergence occurs. 

\begin{lem}
\label{asympt}
Let $(a_j)_{j\in \N^*}$ be a sequence in $S^{l_j,m_j}_\sigma$ where $(l_j)$ and $(m_j)$ are real strictly decreasing sequences such that $\lim_{j\to \infty} l_j= \lim_{j\to \infty} m_j = -\infty$. Then 

\noindent (i) There exists $a\in S^{l_1,m_1}_\sigma$ such that $a \sim \sum_{j=1}^\infty a_j$.

\noindent (ii) If another $a'$ satisfies  $a' \sim \sum_{j=1}^\infty a_j$, then $a-a'\in S^{-\infty}_\sigma$.
\end{lem}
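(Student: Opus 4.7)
The plan is to follow the classical Borel-type construction, adapted to our bigraded setting, exactly as in the proof of asymptotic completeness in Shubin's calculus, reducing to $\R^{2n}$ via the topological isomorphism $T_{z,\bfr,*}$ of Lemma \ref{toposymbol}.

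First, I would dispose of (ii), which is essentially free from the definition: if $a$ and $a'$ both satisfy $a,a' \sim \sum_j a_j$, then for every $k\geq 2$,
$$a - a' = \Bigl(a - \sum_{j<k} a_j\Bigr) - \Bigl(a' - \sum_{j<k} a_j\Bigr) \in S^{l_k,m_k}_\sigma,$$
and since $l_k,m_k \to -\infty$ this forces $a-a' \in S^{-\infty}_\sigma$.

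For (i), fix a frame $(z,\bfr)$ and set $b_j := T_{z,\bfr,*}(a_j) \in S^{l_j,m_j}_\sigma(\R^{2n},L(E_z))$; it suffices to produce $b \in S^{l_1,m_1}_{\sigma,z}$ with $b \sim \sum_j b_j$ and then set $a := T_{z,\bfr,*}^{-1}(b)$. With $\rho \in C_c^\infty(\R^n,[0,1])$ the cutoff used in Lemma \ref{toposymbol} ($\rho=1$ on $B(0,1)$, $\rho=0$ off $B(0,2)$), introduce for $t\geq 1$ the ``outer'' cutoff
$$\chi_t(\rx,\vth) := \bigl(1-\rho(\rx/t)\bigr)^{1-\delta_{\sigma,0}}\bigl(1-\rho(\vth/t)\bigr),$$
so that $1-\chi_t$ is compactly supported in $\vth$ (and, if $\sigma\neq 0$, in $\rx$). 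The estimates (\ref{Deltasigma0}) and (\ref{Deltasigma1}) from the proof of Lemma \ref{toposymbol}, applied now to $1-\rho(\cdot/t)$ in place of $\rho(\cdot/t)$, give bounds of the form $|\del^{\nu-\nu'}\chi_t(\rx,\vth)| \leq K_\nu\, t^{-|\nu-\nu'|}\,\mathbf{1}_{\supp \chi_t}$, with the support lying outside $B(0,t)$ in the $\vth$-variable (and in the $\rx$-variable when $\sigma\neq 0$). Consequently, for any $\nu = (\a,\b)$ and any $k$, on the support of $\chi_{t_j}$ with $j>k$ the weight $\langle \rx\rangle_{z,\bfr}^{\sigma(|\a|-l_k)}\langle \vth\rangle^{|\b|-m_k}$ absorbs a factor $t_j^{\sigma(l_j-l_k)}\cdot t_j^{m_j-m_k}$ (only the second factor for $\sigma=0$), which tends to $0$ as $t_j\to\infty$ since $l_j<l_k$, $m_j<m_k$.

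The main step is then to choose, by a diagonal argument, a strictly increasing sequence $t_j\to\infty$ such that for every $j\geq 2$ and every $2n$-multi-index $\nu$ with $|\nu|\leq j$,
$$q^{S^{l_{j-1},m_{j-1}}_{\sigma,z}}_{\nu}\bigl(\chi_{t_j}\, b_j\bigr) \leq 2^{-j}.$$
Expanding $\del^\nu(\chi_{t_j} b_j)$ by Leibniz and using both the support/decay estimate above for $\chi_{t_j}$ and the symbol estimates for $b_j\in S^{l_j,m_j}_{\sigma,z}$, each seminorm is bounded by a finite sum of terms of the shape $t_j^{\sigma(l_j-l_{j-1})}\cdot t_j^{m_j-m_{j-1}}$ times constants depending only on $\nu$ and $b_j$, which can be made $\leq 2^{-j}$ by taking $t_j$ large. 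Define
$$b := \sum_{j=1}^\infty \chi_{t_j}\, b_j,$$
a locally finite sum (since each $(\rx,\vth)$ lies outside $\supp\chi_{t_j}$ for all sufficiently large $j$), hence smooth. For any $k\geq 2$ we decompose
$$b - \sum_{j<k} b_j \;=\; -\sum_{j<k}(1-\chi_{t_j})\,b_j \;+\; \sum_{j\geq k} \chi_{t_j}\, b_j;$$
the first sum lies in $S^{-\infty}_{\sigma,z}$ because each $(1-\chi_{t_j})$ has compact $\vth$-support (and $\rx$-support when $\sigma\neq 0$), while the second sum lies in $S^{l_k,m_k}_{\sigma,z}$ by the diagonal choice of $t_j$ (convergence in the Fréchet topology of $S^{l_k,m_k}_{\sigma,z}$, since the tail $j\geq k$ has all relevant seminorms summable geometrically). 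Hence $b \sim \sum_j b_j$, and transporting back by $T_{z,\bfr,*}^{-1}$ yields the desired $a$.

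The only real obstacle is the bookkeeping in the diagonal choice of $t_j$: one must ensure that the bound $2^{-j}$ is achieved in \emph{all} Fréchet seminorms simultaneously, which is handled by enumerating multi-indices and using that for each $\nu$ only finitely many $j$ need adjusting before the $t_j$-dependence kicks in. The $\sigma = 0$ case requires care since there is no cutoff in $\rx$, but the $\vth$-cutoff alone suffices because in this regime $l$ plays no role and only the strict decrease of $m_j$ is needed, matching the convention $S^{l,m}_0 = S^m_0$.
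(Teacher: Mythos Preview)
Your overall strategy matches the paper's: reduce to $\R^{2n}$ via $T_{z,\bfr,*}$, multiply each $b_j$ by a cutoff depending on a parameter $t_j\to\infty$, and choose $t_j$ by a diagonal argument. Part (ii) is fine. However, for $\sigma\neq 0$ your specific cutoff is wrong and the argument breaks.

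You define $\chi_t=(1-\rho(\rx/t))(1-\rho(\vth/t))$ and assert that $1-\chi_t$ is compactly supported in $\vth$ and in $\rx$. This is false: $1-\chi_t=\rho(\rx/t)+\rho(\vth/t)-\rho(\rx/t)\rho(\vth/t)$, which equals $1$ identically on the whole slab $\{|\rx|\le t\}$, for all $\vth$. Consequently $(1-\chi_{t_j})\,b_j$ coincides with $b_j$ on that slab and only lies in $S^{l_j,m_j}_{\sigma,z}$, not in $S^{-\infty}_{\sigma,z}$ or even in $S^{l_k,m_k}_{\sigma,z}$ for $k>j$ (since $m_k<m_j$ and there is no extra $\vth$-decay there). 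Thus your decomposition $b-\sum_{j<k}b_j=-\sum_{j<k}(1-\chi_{t_j})b_j+\sum_{j\ge k}\chi_{t_j}b_j$ does not place the first sum in $S^{l_k,m_k}_{\sigma,z}$, and the conclusion $b\sim\sum b_j$ fails as written.

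The fix, which is exactly what the paper does, is to use instead the cutoff $\Delta_t(\rx,\vth):=1-\rho(\rx/t)\rho(\vth/t)$ for $\sigma\neq 0$ (and your $\chi_t=1-\rho(\vth/t)$ for $\sigma=0$). Then $1-\Delta_t=\rho(\rx/t)\rho(\vth/t)$ genuinely has compact support in both variables, so $(\Delta_{t_j}-1)b_j\in\S\subset S^{-\infty}_{\sigma,z}$. On the support of $\Delta_t$ one has $|\rx|\ge t$ \emph{or} $|\vth|\ge t$, which is still enough to gain a factor $\langle t_j\rangle^{\max\{m_j-m_{j-1},\,\sigma(l_j-l_{j-1})\}}$ in the seminorms of $S^{l_{j-1},m_{j-1}}_{\sigma,z}$; the diagonal choice of $t_j$ then proceeds exactly as you outlined. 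Your $\sigma=0$ argument is correct as stated.
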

\begin{proof} $(ii)$ is obvious. Let us prove $(i)$ for a sequence $(a_j)_{j\in \N^*}$ in $S^{l_j,m_j}_\sigma(\R^{2n},L(E_z))$ and with $a \sim \sum_{j=1}^\infty a_j \in S^{l_1,m_1}_\sigma(\R^{2n},L(E_z))$. The result will then follows for a sequence $(b_j)$ in $S^{l,m}_\sigma$ by taking $b:=T_{z,\bfr,*}^{-1}(a)$ where $a_j:=T_{z,\bfr,*}(b_j)$. Define
$$
a'_j(\rx,\vth):= \Delta_{p_j}(\rx,\vth)\,a_j(\rx,\vth)
$$
where $\Delta_{p_j}$ is defined in the proof of Lemma \ref{toposymbol} and $(p_j)$ is a real sequence in $[1,+\infty[$. For any $j\in \N$, $a'_j-a_j\in S^{-\infty}_\sigma(\R^{2n},L(E_z))$. Thus, the result will follow if we prove that for a specified sequence $(p_j)$ and for any $N\geq 0$, there exists $k_0(N)\geq 2$ such that for any $k\geq k_0(N)$, 
\begin{equation}
\label{asympteq1}
\sum_{j=k+1}^\infty q_{N,l_k,m_k}(a'_j)<\infty 
\end{equation}
where $q_{N,l_k,m_k}:=\sup_{|\nu|\leq N} q_{\nu,l_k,m_k}$, and $q_{\nu,l_k,m_k}$ are the semi-norms of $S^{l_k,m_k}_\sigma(\R^{2n},L(E_z))$. Indeed, with $\norm{\del^\nu a'_j}_{\infty}\leq q_{|\nu|,l_k,m_k}(a'_j)$ for $k\geq k_1(\nu)$, $a':=\sum_{j=1}^\infty a'_j$ is a well defined smooth function and we have then $a'-\sum_{j=1}^{k-1} a_j \in S^{l_k,m_k}_{\sigma}(\R^{2n},L(E_z))$.
Using Leibniz rule, we see that for any $2n$-multi-index $\nu:=(\a,\b)$, and any $j\in \N^*$, there is $K_{\nu,j}>0$ such that 
\begin{align*}
\tfrac{1}{K_{\nu,j}}\norm{\del^{\nu}a'_j(\rx,\vth)}_{L(E_z)}&\leq \Delta_p(\rx,\vth) \langle \rx \rangle^{\sigma (l_j-|\a|)} \langle \vth \rangle^{m_j-|\b|}\\
&\hspace{2cm}+\sum_{\nu' <\nu}   |\del^{\nu-\nu'}\Delta_p(\rx,\vth)|\langle \rx \rangle^{\sigma(l_j-|\a'|)} \langle \vth \rangle^{m_j-|\b'|}.
\end{align*}
Let us suppose that $\sigma=0$. The estimate (\ref{Deltasigma0}) yields for any $N\geq 0, k\geq 2$, $j\geq k+1$,
$$
q_{N,l_k,m_k}(a'_j)\leq K_{N,j} \langle p_j \rangle^{m_j-m_{j-1}}  
$$
for a constant $K_{N,j}>0$. If we now fix $p_j$ as $p_j = (2^j \sup_{N\leq j} \set{K_{N,j},1})^{1/(m_{j-1}-m_j)}$, then we see that for any $N\geq 0$, $k\geq N+2$, $j\geq k+1$, we have $q_{N,l_k,m_k}(a'_j)\leq 2^{-j}$ and (\ref{asympteq1}) is satisfied. Suppose now $\sigma\neq 0$. The estimate (\ref{Deltasigma1}) yields for any $N\geq 0, k\geq 2$, $j\geq k+1$,
$$
q_{N,l_k,m_k}(a'_j)\leq K'_{N,j} \langle p_j \rangle^{r_j}  
$$
for a constant $K'_{N,j}>0$ and with $r_j:=\max \{m_j-m'_{j-1},\sigma(l_j-l'_{j-1})\}<0$. If we now fix $p_j$ as $p_j = (2^j \sup_{N\leq j} \set{K'_{N,j},1})^{-r_j^{-1}}$, then we see that for any $N\geq 0$, $k\geq N+2$, (\ref{asympteq1}) is satisfied as for the case $\sigma=0$.
\end{proof}

\subsection{Amplitudes and associated operators on $\S(\R^n,E_z)$}

We shall see in this section amplitudes as generalizations of symbols of the type $S^{l,m}_{\sigma,z}:=S^{l,m}_\sigma(\R^{2n},L(E_z))$ where $z\in M$ is fixed. For each amplitude, a continuous operator from $\S(\R^n,E_z)$ into itself will be defined. Here the spaces $L(E_z)$ and $E_z$ can simply be considered as $\M_n(\C)$ and $\C^n$. The results in this section will be important for pseudodifferential operators on $M$ in the next section.

\begin{defn} An amplitude of order $l,w,m$ and type $\sigma\in[0,1]$, $\kappa\geq 0$, is a smooth function $a\in C^\infty(\R^{3n},L(E_z))$ such that for any $3n$-multi-index $\nu=(\a,\b,\ga)$, there exists $C_\nu>0$ such that
\begin{equation}
\label{amplest}
\norm{\del^{(\a,\b,\ga)} a (\rx,\zeta,\vth)}_{L(E_z)} \leq C_\nu\, \langle \rx\rangle^{\sigma(l-|\a+\b|)}\, \langle\zeta\rangle^{w+\kappa|\a+\b|}\, \langle \vth \rangle^{m-|\ga|}
\end{equation}
for any $(\rx,\zeta,\vth)\in \R^{3n}$. We note $\Pi_{\sigma,\kappa,z}^{l,w,m}:=\Pi_{\sigma,\kappa}^{l,w,m}(\R^{3n},L(E_z))$ the space of amplitudes of order $l,w,m$ and type $\sigma,\kappa$.
\end{defn} 

Remark that $\Pi^{l,w,m}_{0,\kappa,z}$ is independent of $l$, we denote this space $\Pi^{0,w,m}_{0,\kappa,z}$. We note $\Pi^{-\infty,w}_{\sigma,\kappa,z}:=\cap_{l,m} \Pi^{l,w,m}_{\sigma,\kappa,z}$. We set $\Pi^\infty_{\sigma,\kappa,z} := \cup_{l,w,m} \Pi^{l,w,m}_{\sigma,\kappa,z}$ and $\Pi_{\sg,z}^{-\infty}:= \cap_{l,m} \cup_{w,\ka} \Pi_{\sg,\ka,z}^{l,w,m}$. 
We see that $\Pi^{l,w,m}_{\sigma,\kappa,z} \cdot \Pi^{l',w',m'}_{\sigma,\kappa,z} \subseteq \Pi^{l+l',w+w',m+m'}_{\sigma,\kappa,z}$ and $\Pi^{l,w,m}_{\sigma,\kappa,z} \subseteq \Pi^{l',w',m'}_{\sigma,\kappa, z}$ for $m\leq m'$, $w\leq w'$, and $l\leq l'$. Thus, $\Pi^{\infty}_{\sigma,\kappa,z}$ is a $*$-algebra, which is trigraduated for $\sigma>0$ and bigraduated for $\sigma=0$. 
Note also that if $a\in \Pi^{l,w,m}_{\sigma,\kappa, z}$, then $\del^{(\a,\b,\ga)}a \in \Pi^{l-|\a+\b|,w+\kappa|\a+\b|,m-|\ga|}_{\sigma,\kappa,z}$. 

Amplitudes and symbols in $S^{l,m}_{\sigma, z}$ are related by the following lemma:

\begin{lem} 
\label{Amplisymb}
(i) For any $a\in \Pi^{l,w,m}_{\sigma,\kappa,z}$ we have $a_{\zeta=0}:=(\rx,\vth)\mapsto a(\rx,0,\vth)$ in $S^{l,m}_{\sigma,z}$. 

\noindent (ii) For any $s\in S^{l,m}_{\sigma,z}$, the function $(\rx,\zeta,\vth)\mapsto s(\rx,\vth)$ is in $\Pi^{l,0,m}_{\sigma,0,z}$.

\noindent (iii) For any $f\in S_{\sigma}(\R^n)$, the function $(\rx,\zeta,\vth)\mapsto f(\rx)\Id_{L(E_z)}$ is in $\Pi_{\sigma,0,z}^{0,0,0}$.
\end{lem}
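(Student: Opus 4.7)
The three assertions are direct verifications from the defining estimates, so the plan is to unwind the definitions carefully and check that the required bounds hold with the correct indices.

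For (i), given $a\in \Pi^{l,w,m}_{\sigma,\kappa,z}$, the first move is to observe that for any $n$-multi-indices $\a$ and $\ga$, the chain rule gives
$\del^{(\a,\ga)}(a_{\zeta=0})(\rx,\vth) = (\del^{(\a,0,\ga)}a)(\rx,0,\vth)$,
so one simply specializes the amplitude estimate \eqref{amplest} at $\b=0$ and $\zeta=0$. The $\langle\zeta\rangle^{w+\ka|\a+\b|}$ factor collapses to $1$, and what remains is precisely the symbol bound \eqref{symbolIneq} with indices $(l,m)$. Hence $a_{\zeta=0}\in S^{l,m}_{\sigma,z}$.

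For (ii), define $\wt s(\rx,\zeta,\vth):=s(\rx,\vth)$ and compute derivatives: $\del^{(\a,\b,\ga)}\wt s = 0$ whenever $\b\neq 0$, and equals $\del^{(\a,\ga)}_{\rx,\vth}s(\rx,\vth)$ when $\b=0$. In the first case the amplitude bound is trivially satisfied; in the second case the symbol estimate for $s$ gives exactly $\langle\rx\rangle^{\sigma(l-|\a|)}\langle\vth\rangle^{m-|\ga|}$, which matches the $\Pi^{l,0,m}_{\sigma,0,z}$ bound (with $w=0$, $\ka=0$, and $|\a+\b|=|\a|$ since $\b=0$). The $\langle\zeta\rangle^{0}=1$ factor is harmless. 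For (iii), the argument is the same, noting that $\del^{(\a,\b,\ga)}(f(\rx)\Id_{L(E_z)})$ vanishes unless $\b=\ga=0$, in which case it equals $\del^\a f(\rx)\Id_{L(E_z)}$ and the $S_\sigma(\R^n)$ estimate $|\del^\a f(\rx)|\leq C_\a\langle\rx\rangle^{-\sigma|\a|}$ matches the $\Pi^{0,0,0}_{\sigma,0,z}$ bound with $l=w=m=0$.

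There is no real obstacle here; the statement is a triple bookkeeping check designed to make the interaction between symbol and amplitude classes transparent. The only point that merits a careful sentence in the written proof is the cancellation mechanism in (ii) and (iii): the terms where the $\zeta$-derivative is nontrivial vanish identically, so the apparent mismatch between $|\a|$ in the symbol estimate and $|\a+\b|$ in the amplitude estimate is never actually tested.
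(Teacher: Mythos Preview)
Your proof is correct and follows essentially the same approach as the paper: the paper phrases the chain-rule observations compactly via the maps $P(\rx,\vth)=(\rx,0,\vth)$ and $Q(\rx,\zeta,\vth)=(\rx,\vth)$, and for (iii) factors through (ii) by first noting $(\rx,\vth)\mapsto f(\rx)\Id_{L(E_z)}\in S^{0,0}_{\sigma,z}$, but the content is identical to what you wrote out explicitly.
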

\begin{proof}
$(i)$ follows from the fact that $\del^\nu (a \circ P) = (\del^{P(\nu)} a) \circ P$ where $P(\rx,\vth):=(\rx,0,\vth)$.

\noindent $(ii)$ Noting $Q(\rx,\zeta,\vth):=(\rx,\vth)$, the result follows from $\del^{\a,\b,\ga} (s\circ Q) = \delta_{\b,0} (\del^{\a,\ga} s)\circ Q$.

\noindent $(iii)$ follows from $(ii)$ and the fact that $(\rx,\vth)\mapsto f(\rx)\Id_{L(E_z)} \in S^{0,0}_{\sigma,z}$. 
\end{proof}

As the spaces of symbols, the $\Pi_{\sigma,\kappa,z}^{l,w,m}$ are naturally Fr\'{e}chet spaces:

\begin{lem}
\label{topoampli}
The following semi-norms on $\Pi_{\sigma,\kappa,z}^{l,w,m}$: 
$$
q_{(\a,\b,\ga)}^{l,w,m}(a):= \sup_{(\rx,\zeta,\vth)\in \R^{3n}} \langle \rx \rangle^{\sigma(|\a+\b|-l)} \langle \zeta\rangle^{-w-\kappa|\a+\b|}\langle \vth \rangle^{|\ga|-m}\norm{\del^{(\a,\b,\ga)} a (\rx,\zeta,\vth)}_{L(E_z)}
$$
determine a Fr\'{e}chet topology on $\Pi_{\sigma,\kappa,z}^{l,w,m}$. The following inclusions are continous for these topologies: $\Pi^{l,w,m}_{\sigma,\kappa,z} \cdot \Pi^{l',w',m'}_{\sigma,\kappa,z} \subseteq \Pi^{l+l',w+w',m+m'}_{\sigma,\kappa,z}$, $\Pi^{l,w,m}_{\sigma,\kappa,z} \subseteq \Pi^{l',w',m'}_{\sigma,\kappa,z}$ ($m\leq m'$, $w\leq w'$ and $l\leq l'$) and $\Pi^{-\infty,w}_{\sigma,\kappa,z}\subseteq \Pi^{l,w,m}_{\sigma,\kappa,z}$. Moreover, the last inclusion is dense when  $\Pi^{l,w,m}_{\sigma,\kappa,z}$ has the topology of $\Pi^{l',w,m'}_{\sigma,\kappa,z}$ for $m< m'$ and $l<l'$.
\end{lem}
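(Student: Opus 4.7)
The plan is to adapt the proof strategy of Lemma \ref{toposymbol} to the trigraduated setting with the extra $\zeta$-variable. The four assertions are: (a) Fréchet property of the stated semi-norms, (b) continuity of the product inclusion, (c) continuity of the inclusion when the indices increase, and (d) density of $\Pi^{-\infty,w}_{\sigma,\ka,z}$ in $\Pi^{l,w,m}_{\sigma,\ka,z}$ equipped with a strictly weaker topology.

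For (a), I would observe that the $q^{l,w,m}_{(\a,\b,\ga)}$ are a countable family of semi-norms separating points (since a function with all semi-norms zero vanishes identically), so they define a metrizable locally convex topology. Completeness is the substantive point: a Cauchy sequence $(a_k)$ yields, via the uniform bounds $q^{l,w,m}_{(\a,\b,\ga)}(a_k-a_{k'})\to 0$, a sequence $(\del^{(\a,\b,\ga)}a_k)$ that is Cauchy in the weighted sup norm on $\R^{3n}$, hence uniformly Cauchy on every compact, so converges in $C^\infty(\R^{3n},L(E_z))$ to some $a$; passing to the limit in \eqref{amplest} shows $a\in \Pi^{l,w,m}_{\sigma,\ka,z}$ and $a_k\to a$ for every semi-norm.

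For (b) I would apply Leibniz to $\del^{(\a,\b,\ga)}(ab)$, obtaining a finite sum indexed by $(\a',\b',\ga')\leq(\a,\b,\ga)$, and bound each summand using $\norm{\del^{(\a',\b',\ga')}a}\leq q^{l,w,m}_{(\a',\b',\ga')}(a)\langle\rx\rangle^{\sigma(l-|\a'+\b'|)}\langle\zeta\rangle^{w+\ka|\a'+\b'|}\langle\vth\rangle^{m-|\ga'|}$ and the analogous bound for $b$; multiplying, using $|\a'+\b'|+|(\a-\a')+(\b-\b')|=|\a+\b|$, yields the target weight $\langle\rx\rangle^{\sigma(l+l'-|\a+\b|)}\langle\zeta\rangle^{w+w'+\ka|\a+\b|}\langle\vth\rangle^{m+m'-|\ga|}$ with a bilinear bound in finitely many semi-norms. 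Item (c) is then immediate since $\langle\cdot\rangle\geq 1$ gives $q^{l',w',m'}_\nu(a)\leq q^{l,w,m}_\nu(a)$ when $l\leq l',w\leq w',m\leq m'$, and (c) implies the continuity part of the statement about $\Pi^{-\infty,w}_{\sigma,\ka,z}\subseteq\Pi^{l,w,m}_{\sigma,\ka,z}$ since the former has the projective-limit topology over $(l,m)$.

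The nontrivial point is (d), the density. Following Lemma \ref{toposymbol}, I would set
\[
a_p(\rx,\zeta,\vth):=(\rho(\rx/p))^{1-\delta_{\sigma,0}}\,\rho(\vth/p)\,a(\rx,\zeta,\vth),
\]
with $\rho\in C^\infty_c(\R^n,[0,1])$, $\rho=1$ on $B(0,1)$, $\rho=0$ off $B(0,2)$; no cutoff is placed on $\zeta$ since $w$ is the same in source and target. The factor $a$ itself still contributes the $\langle\zeta\rangle^{w+\ka|\a+\b|}$ weight under any derivative, while the compact support in $(\rx,\vth)$ (or only $\vth$ when $\sigma=0$) makes $a_p$ belong to $\Pi^{-\infty,w}_{\sigma,\ka,z}$. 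For convergence $a_p\to a$ in $\Pi^{l',w,m'}_{\sigma,\ka,z}$ with $l<l'$, $m<m'$, I would write $\del^\nu(a-a_p)$ by Leibniz with $\Delta_p:=1-(\rho(\rx/p))^{1-\delta_{\sigma,0}}\rho(\vth/p)$ and re-use verbatim the estimates \eqref{Deltasigma0} (case $\sigma=0$) and \eqref{Deltasigma1} (case $\sigma\neq 0$) from the proof of Lemma \ref{toposymbol}; the extra $\langle\zeta\rangle^{w+\ka|\a+\b|}$ factor is common to source and target and drops out of $q^{l',w,m'}_\nu$. I expect the main bookkeeping obstacle to be matching the derivative orders on $(\rx,\vth)$ that appear from Leibniz (so that the gain $p^{-|\nu-\nu'|}$ beats the loss from $\langle\rx\rangle^{\sigma(l-l')}\langle\vth\rangle^{m-m'}$ on the annuli $[\sgn(\a-\a')p,2p]\times[\sgn(\b-\b')p,2p]$), but this is exactly what was handled in Lemma \ref{toposymbol} and carries over since the cutoff does not involve $\zeta$.
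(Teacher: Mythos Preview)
Your proposal is correct and follows essentially the same route as the paper: the continuity claims are declared straightforward, and density is obtained via the identical cutoff sequence $a_p(\rx,\zeta,\vth)=(\rho(\rx/p))^{1-\delta_{\sigma,0}}\rho(\vth/p)\,a(\rx,\zeta,\vth)$ with the Leibniz/$\Delta_p$ argument reused from Lemma~\ref{toposymbol}. One trivial bookkeeping slip: in your final sentence the annuli should be indexed by $(\a-\a')$ and $(\ga-\ga')$ (the cutoff acts on $\rx$ and $\vth$, forcing $\b'=\b$), not $(\a-\a')$ and $(\b-\b')$; the paper's proof records the residual factor $\langle\zeta\rangle^{\kappa(|\a'+\b'|-|\a+\b|)}\leq 1$, which confirms your claim that the $\zeta$-weight causes no trouble.
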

\begin{proof} The continuity results are straightforward. For the density result, we prove as in Lemma \ref{toposymbol}, that for any $a\in \Pi^{l,w,m}_{\sigma,\kappa,z}$ the sequence
$$
a_p(\rx,\zeta,\vth):= (\rho(\rx/p))^{1-\delta_{\sigma,0}}\, \rho(\vth/p)\, a(\rx,\zeta,\vth) =:(1-\Delta_{p}(\rx,\vth))\, a(\rx,\zeta,\vth)
$$
converges to $a$ for the topology of $\Pi^{l',w,m'}_{\sigma,\kappa}(\R^{2n},L(E_z))$ where $m'>m$ and $l'>l$. First note that the application $(\rx,\zeta,\vth)\mapsto  (\rho(\rx/p))^{1-\delta_{\sigma,0}}\, \rho(\vth/p)\, \, \Id_{L(E_z)}$ is an amplitude in $\Pi^{-\infty,0}_{\sigma,0,z}$. Thus, $(a_p)_{p\in \N^*}$ is a sequence in  $\Pi_{\sigma,\kappa,z}^{-\infty,w}$. We define the function $R_p$ such that $q_{(\a,\b,\ga)}^{l',w,m'}(a-a_p)=\sup_{(\rx,\zeta,\vth)\in \R^{3n}}R_p(\rx,\zeta,\vth)$, where $m'>m$ and $l'>l$. For a given $3n$-multi-index $\nu:=(\a,\b,\ga)$, we get with Leibniz rule, for a $K>0$,
\begin{align*}
\tfrac{1}{K}\,R_p (\rx,\zeta,\vth)\leq &\ \Delta_p(\rx,\vth)\, \langle \rx \rangle^{\sigma(l-l')}\, \langle \vth \rangle^{m-m'} + \sum_{\nu'< \nu}   |\del^{\nu-\nu'}\Delta_p(\rx,\vth)|\\
&\hspace{1cm}\times\langle \rx \rangle^{\sigma(l-l'+|\a+\b|-|\a'+\b'|)}\langle \zeta \rangle^{\kappa(|\a'+\b'|-|\a+\b|)} \langle \vth \rangle^{m-m'+|\ga|-|\ga'|}\, .
\end{align*}
Suppose that $\sigma=0$. In that case, 
$|\Delta_p(\rx,\vth)|\leq 1_{[p,+\infty[}(\vth)$ and if $\nu'<\nu$, 
\begin{equation*}
|\del^{\nu-\nu'}\Delta_p(\rx,\vth)|\leq \delta_{\a,\a'}\, \delta_{\b,\b'}\,K_{\ga}\, p^{-|\ga|+|\ga'|}\, 1_{[p,2p]}(\vth)\,.
\end{equation*}
As a consequence we find
$R_p (\rx,\zeta,\vth) = \O_{p\to \infty} (\langle p \rangle^{m-m'})$, as in Lemma \ref{toposymbol}. 
Suppose now $\sigma\neq 0$. In that case $|\Delta_p(\rx,\vth)|\leq 1_{F_p}(\rx,\vth)$ where $F_p:=\R^{2n}-B_{n}(0,p)\times B_n(0,p)$ and if $\nu'<\nu$, for a constant $K_\nu>0$
\begin{equation*}
|\del^{\nu-\nu'}\Delta_p(\rx,\vth)|\leq \delta_{\b-\b',0} K_\nu\, 1_{[\sgn(\a-\a')p,2p]}(\rx)\  1_{[\sgn(\ga-\ga')p,2p]}(\vth)\, p^{-|\nu|+|\nu'|}\, .
\end{equation*}
As a consequence, we find
$R_p (\rx,\zeta,\vth) = \O_{p\to \infty} (\langle p \rangle^{r})$ where $r:=\max \{m-m',\sigma(l-l')\}<0$ and the result follows.
\end{proof}

We shall note $\Delta_\zeta$ the differential operator $\sum_{i=1}^n \del_{\zeta_i}^2$. 
The following formula is valid for any $\vth,\zeta\in \R^n$ and $p\in \N$,
\begin{equation}
\label{e2pii}
\langle \vth\rangle^{2p}e^{2\pi i \langle \vth,\zeta\rangle} = (1-(2\pi)^{-2} \Delta_{\zeta})^{p}\,e^{2\pi i \langle \vth,\zeta\rangle}=: L_{\zeta}^p \,e^{2\pi i \langle \vth,\zeta\rangle}\, .
\end{equation} 
A computation shows that $(1-(2\pi)^{-2} \Delta_{\zeta})^{p} = \sum_{0\leq |\b|\leq p} c_{p,\b}\,\del^{2\b}_\zeta$, where the summation is on $n$-multi-indices $\b$ and $c_{p,\b}:= \tbinom{p}{|\b|}(-1)^{|\b|}(2\pi)^{-2|\b|}\b!$. 
We shall also use the following useful formula valid for any $\vth\in \R^n$, $\zeta\in \R^{n}\backslash\set{0}$ and $p\in \N$,
\begin{equation}
\label{Mformula}
e^{2\pi  i \langle \vth,\zeta\rangle} = \sum_{|\b|=p} \la_\b\,\tfrac{\zeta^\b}{\norm{\zeta}^{2p}}\, \del^{\b}_\vth \,e^{2\pi  i \langle \vth,\zeta\rangle} =: M_{\vth}^{p,\zeta} \,e^{2\pi  i \langle \vth,\zeta\rangle} 
\end{equation}  
where $\la_\b :=\b!(2\pi)^{-|\b|}i^{|\b|}$. We define $^t M_{\vth}^{p,\zeta}:= \sum_{|\b|=p} \la_\b (-1)^p \tfrac{\zeta^\b}{\norm{\zeta}^{2p}} \del^{\b}_\vth$. 

\begin{defn}
\label{OFZ}
We note $\O_{f,z}$, where $f_1,f_2,f_3:\N^{3n}\to \R$, and $f:=(f_1,f_2,f_3)$, the space of smooth functions in $C^{\infty}(\R^{3n},L(E_z))$ such that for any $3n$-multi-index $\nu=(\a,\b,\ga)$, there is $C_\nu>0$ such that 
$$
\norm{\del^\nu a (\rx,\zeta,\vth)}_{L(E_z)} \leq C_\nu \langle \rx \rangle^{f_1(\nu)}\langle \zeta\rangle ^{f_2(\nu)} \langle \vth\rangle ^{f_3(\nu)}
$$
uniformly in $(\rx,\zeta,\vth)\in \R^{3n}$. 
\end{defn}

The vector space $\O_{f,z}$ has a natural family of seminorms $q_{\nu}^{f}$ given by the best constants $C_\nu$ in the previous estimate. With this family, $\O_{f,z}$ is a Fr\'{e}chet space. Obviously, amplitudes in $\Pi_{\sg,\ka,z}^{l,w,m}$ form an $\O_{f,z}$ space where $f_1(\nu):=\sg(l-|\a+\b|)$, $f_2(\nu):=w+\kappa|\a+\b|$ and $f_3(\nu):=m-|\ga|$. For a given triple $f:=(f_1,f_2,f_3)$ and $\rho\in \R$, we will note $f_{3,\rho,\a,\ga}:=\sup_{\b} f_3(\a,\b,\ga)-\rho|\b|$, $f_{2,\rho,\a,\b}:=\sup_{\ga} f_2(\a,\b,\ga)-\rho|\ga|$ and $f_{1,\rho,\a,\b}:=\sup_{\ga} f_1(\a,\b,\ga)-\rho|\ga|$.

\begin{prop}
\label{ampliOP} Let $\Ga$ a continuous linear operator on the space $\S(\R^{2n},L(E_z))$, and $f:=(f_1,f_2,f_3)$ a triple such that there exists $\rho<1$ such that $f_{3,\rho,0,0}<\infty$.

\noindent (i) For any function $a\in \O_{f,z}$ the following antilinear form on $\S(\R^{2n},L(E_z))$ 
$$
\langle \Op_{\Ga}(a) , u\rangle :=  \int_{\R^{3n}} e^{2\pi i \langle \vth,\zeta \rangle } \Tr(a(\rx,\zeta,\vth)\, \Ga(u)^*(\rx,\zeta)) \, d\zeta\,d\vth\,  d\rx
$$
is in $\S'(\R^{2n},L(E_z))$.

\noindent (ii) For any given $u\in \S(\R^{2n},L(E_z))$, the linear form $L_{u,\Ga}:= a\mapsto \langle \Op_{\Ga}(a) , u\rangle$ is continuous on $\O_{f,z}$. In particular $L_{u,\Ga}$ is continuous on any amplitude space $\Pi_{\sigma,\kappa,z}^{l,w,m}$. 
\end{prop}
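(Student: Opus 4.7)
The plan is to interpret the integral defining $\Op_{\Ga}(a)$ as an oscillatory integral, and to make it absolutely convergent by integrating by parts in $\zeta$ using the identity $L_\zeta^p e^{2\pi i\langle\vth,\zeta\rangle}=\langle\vth\rangle^{2p}e^{2\pi i\langle\vth,\zeta\rangle}$ from \eqref{e2pii}. The hypothesis $f_{3,\rho,0,0}<\infty$ for some $\rho<1$ is tailored exactly to this step: after applying $L_\zeta^p=\sum_{|\b|\leq p}c_{p,\b}\del^{2\b}_\zeta$, the $\vth$-growth of $\del^{(0,\b',0)}a$ is bounded by $\langle\vth\rangle^{f_3(0,\b',0)}\leq\langle\vth\rangle^{f_{3,\rho,0,0}+2\rho p}$, and the extra factor $\langle\vth\rangle^{-2p}$ leaves $\langle\vth\rangle^{f_{3,\rho,0,0}+2p(\rho-1)}$, which is integrable on $\R^n$ once $p$ is chosen with $p>(n+f_{3,\rho,0,0})/(2(1-\rho))$.

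More concretely, first I would fix $u\in\S(\R^{2n},L(E_z))$ and set $v:=\Ga(u)\in\S(\R^{2n},L(E_z))$, which is Schwartz by continuity of $\Ga$. Applying Leibniz inside $L_\zeta^p$ gives
\[
L_\zeta^p\bigl[\Tr(a(\rx,\zeta,\vth)v^*(\rx,\zeta))\bigr]=\sum_{|\b|\leq p}\sum_{\b'\leq 2\b}c_{p,\b}\tbinom{2\b}{\b'}\Tr\!\left(\del^{(0,\b',0)}a(\rx,\zeta,\vth)\,\del^{(0,2\b-\b',0)}v^*(\rx,\zeta)\right).
\]
Using the $\O_{f,z}$ bounds on $a$ together with the fact that $v$ and all its derivatives decay faster than any power of $\langle\rx\rangle^{-1}\langle\zeta\rangle^{-1}$, the triple integral
\[
I(a,u):=\int_{\R^{3n}}\langle\vth\rangle^{-2p}\,e^{2\pi i\langle\vth,\zeta\rangle}L_\zeta^p\bigl[\Tr(a(\rx,\zeta,\vth)v^*(\rx,\zeta))\bigr]\,d\zeta\,d\vth\,d\rx
\]
converges absolutely once $p$ is chosen as above, and is bounded by a finite sum of products $q^f_{(0,\b',0)}(a)\cdot s_{N}(v)$ where $s_N$ is a Schwartz seminorm (and $s_N(v)$ is controlled by Schwartz seminorms of $u$ via continuity of $\Ga$). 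To justify that $I(a,u)$ equals $\langle\Op_\Ga(a),u\rangle$, I would insert cutoffs $\chi(\eps\zeta)\chi(\eps\vth)$ with $\chi\in C^\infty_c(\R^n)$, $\chi(0)=1$, perform the classical integration by parts in $\zeta$ in the absolutely convergent truncated integral (with no boundary terms), and then let $\eps\to0$ using dominated convergence based on the uniform bound just obtained.

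From the explicit bound $|I(a,u)|\leq\sum_{|\b'|\leq 2p}C_{\b'}\,q^f_{(0,\b',0)}(a)\,s_{N}(\Ga(u))$, both $(i)$ and $(ii)$ follow directly: for fixed $a$, continuity of $\Ga$ and standard estimates on Schwartz seminorms yield that $u\mapsto\langle\Op_\Ga(a),u\rangle$ is antilinear and continuous, i.e. lies in $\S'(\R^{2n},L(E_z))$; for fixed $u$, the same bound gives continuity of $a\mapsto\langle\Op_\Ga(a),u\rangle$ on the Fr\'echet space $\O_{f,z}$. For the amplitude case $\Pi^{l,w,m}_{\sigma,\kappa,z}$ one has $f_3(\a,\b,\ga)=m-|\ga|$ independent of $\b$, so $f_{3,\rho,0,0}=m$ for any $\rho\geq0$; the inclusion $\Pi^{l,w,m}_{\sigma,\kappa,z}\hookrightarrow\O_{f,z}$ is continuous by comparison of the defining seminorms (Lemma~\ref{topoampli}), and composing with $L_{u,\Ga}$ gives the last assertion.

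The main technical obstacle is the first step, namely controlling the $\vth$-growth after integration by parts uniformly in the multi-index $\b'$. The condition $f_{3,\rho,0,0}<\infty$ with $\rho<1$ is precisely what allows the loss $\rho|\b'|$ produced by $\zeta$-differentiations to be beaten by the gain $-2p$ coming from $L_\zeta^p$; once this is secured the rest is a standard oscillatory-integral argument combined with the bookkeeping in the Leibniz expansion. The argument also makes transparent why the $\rx$ and $\zeta$ integrations require no additional phase manipulation (no use of the operator $M_\vth^{p,\zeta}$ of \eqref{Mformula}), since $\Ga(u)$ is Schwartz in $(\rx,\zeta)$ and already dominates the polynomial growth of $a$ in those variables.
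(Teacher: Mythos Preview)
Your proposal is correct and follows essentially the same approach as the paper: integrate by parts in $\zeta$ using the identity $L_\zeta^p e^{2\pi i\langle\vth,\zeta\rangle}=\langle\vth\rangle^{2p}e^{2\pi i\langle\vth,\zeta\rangle}$ together with Leibniz, choose $p$ so that the $\vth$-growth $\langle\vth\rangle^{f_{3,\rho,0,0}+2p(\rho-1)}$ becomes integrable, and absorb the $(\rx,\zeta)$-growth with the Schwartz decay of $\Ga(u)$. The only cosmetic differences are that the paper first treats the case $\Ga=\Id$ (defining $I(a)$ and then setting $\Op_\Ga(a)=I(a)\circ\Ga$) and justifies the integration by parts by observing that $\zeta\mapsto a(\rx,\zeta,\vth)u^*(\rx,\zeta)$ is Schwartz for fixed $(\rx,\vth)$, rather than via your cutoff regularization; both justifications are standard and equivalent here.
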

\begin{proof} $(i)$ We have $\Op_{\Ga}(a) = I(a) \circ \Ga$, where $I(a)$ is the antilinear form on  
$\S(\R^{2n},L(E_z))$:
$$
\langle I(a) , u\rangle :=  \int_{\R^{3n}} e^{2\pi i \langle \vth,\zeta \rangle } \Tr(a(\rx,\zeta,\vth)\, u^*(\rx,\zeta)) \,  d\zeta\, d\vth\, d\rx\,.
$$
We shall prove that $I(a)\in \S'(\R^{2n},L(E_z))$, which will give the result. Let $u\in \S(\R^{2n},L(E_z))$ and let us fix for now $\rx$ and $\vth \in \R^n$. We can check that the map $\zeta\mapsto a(\rx,\zeta,\vth)\, u^*(\rx,\zeta)$ is in $\S(\R^n, L(E_z))$. As a consequence, with (\ref{e2pii}) 
and integration by parts, we get with $R(\rx,\vth):= \int_{\R^n} e^{2\pi i \langle \vth,\zeta \rangle } a(\rx,\zeta,\vth)\,u^*(\rx,\zeta) \,d\zeta$,
\begin{align*}
 R(\rx,\vth)&= \int_{\R^n} e^{2\pi i \langle \vth,\zeta \rangle }\langle \vth\rangle^{-2p}  (1-(2\pi)^{-2} \Delta_{\zeta})^{p}\, a(\rx,\zeta,\vth)\, u^*(\rx,\zeta)\, d\zeta \\
 &= \sum_{0\leq |\beta| \leq p}\,\sum_{\b'\leq 2\b} c_{p,\b}\,\tbinom{2\b}{\b'}\langle \vth\rangle^{-2p}\int_{\R^n} e^{2\pi i \langle \vth,\zeta \rangle }(\del^{(0,\b',0)}\, a(\rx,\zeta,\vth))\,(\del^{(0,2\b-\b')} u^*(\rx,\zeta))\, d\zeta\, .
\end{align*}
Thus, for any $\rx,\vth\in \R^n$, we get by fixing $p$ such that $2(\rho-1)p+f_{3,\rho,0,0}\leq -2n$ (this is possible since $\rho<1$) that for any $N\in \N$,
$$
\norm{R(\rx,\vth)}_{L(E_z)} \leq C_p \langle \vth\rangle^{-2n} \int_{\R^{n}}\langle \rx,\zeta \rangle^{-N+r_p}\, d\zeta \sum_{0\leq |\beta|\leq p}\sum_{\b'\leq 2\b} q_{0,\b',0}^{f}(a)\, q_{N,(0,2\b-\b')}(u)\,   
$$
for a $C_p>0$, where $r_p:=\max_{|\b'|\leq 2p} |f_1(0,\b',0)| + |f_2(0,\b',0)|$.  If we now fix $N$ such that $-N+r_p\leq -4n$, we see, using the inequality $\langle \rx,\zeta \rangle^{-2} \leq \langle \rx \rangle^{-1} \langle \zeta\rangle^{-1}$, that there is $C_{\rho,f}>0$ such that 
\begin{equation}
\label{Iaueq}
|\langle I(a), u\rangle |\leq C_{\rho,f} \sum_{0\leq |\beta|\leq p}\sum_{\b'\leq 2\b}  q_{0,\b',0}^{f}(a)\,q_{N,(0,2\b-\b')}(u)
\end{equation}
which yields the result. 

\noindent $(ii)$ The continuity of $L_{u,\Ga}$ on $\O_{f,z}$ follows directly from (\ref{Iaueq}) since $L_{u,\Ga}(a)=\langle I(a),\Ga(u)\rangle$. Since $\Pi^{l,w,m}_{\sigma,\kappa,z}= \O_{f,z}$ for a triple $f=(f_1,f_2,f_3)$ such that $f_{3,0,0,0}<\infty$, $L_{u,\Ga}$ is continous on any amplitude space.
\end{proof}

For any amplitude $a$, we will also note $\Op_{\Ga}(a)$ the continous linear map from $\S(\R^{n},E_z)$ into $\S'(\R^n,E_z)$, associated to the tempered distribution $u\mapsto \langle \Op_{\Ga}(a),u\rangle$. 

\begin{rem} If $(M,\exp,E,d\mu,\psi)$ has a $\O_M$-bounded geometry, we saw that for any frame $(z,\bfr)$ and $\la\in [0,1]$, the $\Ga_{\la,z,\bfr}$ maps are topological isomorphisms on $\S'(\R^{2n},L(E_z))$. Thus, Lemma \ref{ampliOP} implies that for a given $a\in \Pi_{\sigma,\kappa,z}^{l,w,m}$, we can define a family indexed by $\la\in [0,1]$ of operators $\Op_{\Ga_{\la,z,\bfr}}(a)$ which are continous from  $\S(\R^{n},E_z)$ into $\S'(\R^n,E_z)$.
\end{rem}

\begin{rem}
\label{Oplien} Suppose that $(M,\exp,E,d\mu)$ has a $\S_{\sigma}$ bounded geometry and that $\psi$ is a $\O_M$-linearization. We deduce from (\ref{eq-oplazb}) that if $s$ is a symbol in $S^{l,m}_\sigma$ and $\la\in [0,1]$, we have $(\Op_\la(s))_{z,\bfr}=\Op_{\Ga_{\la,z,\bfr}}(\mu s_{z,\bfr})$ where $(z,\bfr)$ is a frame, $s_{z,\bfr}:=T_{z,\bfr,*}(s)$ and $\mu s_{z,\bfr}:=(\rx,\zeta,\vth)\mapsto \mu_{z,\bfr}(\rx)\, s_{z,\bfr}(\rx,\vth) \in \Pi^{l,0,m}_{\sigma,0,z}$. We will also note $\mu^{-1} s_{z,\bfr}(\rx,\zeta,\vth):= \mu_{z,\bfr}^{-1}(\rx) s_{z,\bfr}(\rx,\vth) \in \Pi^{l,0,m}_{\sigma,0,z}$.
\end{rem}

We now establish a sufficient condition on $\Ga$ and $a$ in order to have $\Op_\Ga(a)$ stable (and continuous) on $\S(\R^n,E_z)$.
The result will be used to establish regularity of pseudodifferential operators.

\begin{lem}
\label{amplContinu} Let $\Ga$ be a continuous linear operator on $\S(\R^{2n},L(E_z))$ of the form $\Ga= L_{\tau_1}\circ R_{\tau_2}\circ C_{\Phi}$, where $\tau_i\in \O_M(\R^{2n},L(E_z))$ (for $1\leq i\leq 2$), and $\Phi:=(\pi_1,\psi)\in C^{\infty}(\R^{2n},\R^{2n})$ is such that $\psi\in \O_M(\R^{2n},\R^n)$ and there exist $c,\eps,r >0$, such that for any $(\rx,\zeta)\in \R^{2n}$, $\langle \psi(\rx,\zeta)\rangle\geq c \langle \rx \rangle^{\eps} \langle \zeta \rangle^{-r}$ and for any $\rx\in \R^n$, there is $c_\rx >0$ such that $\langle\psi(\rx,\zeta)\rangle\geq c_\rx \langle \zeta \rangle^{\eps}$ uniformly in $\zeta\in \R^n$.

Suppose that $f=(f_1,f_2,f_3)$ is such that there exist $(\rho_1,\rho_2,\rho_3)\in \R^3$ such that $\rho_3<1$, $(r/\eps)\rho_1+\rho_2<1$ and for any $2n$-multi-index $\mu$, $f_{1,\rho_1,\mu}<\infty$, $f_{2,\rho_2,\mu}<\infty$, $f_{3,\rho_3,\mu}<\infty$ and for any $n$-multi-index $\a$ $f_{3,\rho_3,\a}:=\sup_{\ga}f_{3,\rho_3,\a,\ga}<\infty$. Then for any function $a\in \O_{f,z}$, the operator $\mathfrak{Op}_{\Ga}(a)$ is continuous from $\S(\R^{n},E_z)$ into itself. In particular, this is the case for any amplitude $a\in \Pi^{l,w,m}_{\sigma,\kappa,z}$.
\end{lem}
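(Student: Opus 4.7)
The plan is to realize $(\Op_{\Ga}(a)v)(\rx)$ as a concrete oscillatory integral and bound its Schwartz seminorms in terms of those of $v$. First I unwind the definition: for $v,w\in\S(\R^n,E_z)$, setting $u_{v,w}(\rx,\ry):=v(\rx)w(\ry)^*\in\S(\R^{2n},L(E_z))$ (the rank-one operator $\eta\mapsto v(\rx)(w(\ry),\eta)_{E_z}$), and using $\Ga=L_{\tau_1}\circ R_{\tau_2}\circ C_\Phi$ with $\Phi=(\pi_1,\psi)$, cyclicity of the trace gives
\[
(\Op_{\Ga}(a)v)(\rx)=\int_{\R^{2n}} e^{2\pi i\langle\vth,\zeta\rangle}\,\tau_1^*(\rx,\zeta)\,a(\rx,\zeta,\vth)\,\tau_2^*(\rx,\zeta)\,v(\psi(\rx,\zeta))\,d\zeta\,d\vth,
\]
to be read as an oscillatory integral.

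To give this meaning and control its seminorms, I perform two integrations by parts. Using $L_\zeta^p e^{2\pi i\langle\vth,\zeta\rangle}=\langle\vth\rangle^{2p}e^{2\pi i\langle\vth,\zeta\rangle}$ from \eqref{e2pii} and transposing yields a factor $\langle\vth\rangle^{-2p}$ at the cost of $\zeta$-derivatives of the rest of the integrand; for $p$ large this renders the $\vth$-integration absolutely convergent against the $\vth$-growth of $a$, which is where $\rho_3<1$ and the uniform-in-$\ga$ condition $f_{3,\rho_3,\a}<\infty$ are needed. On the region $\|\zeta\|\geq 1$, the identity \eqref{Mformula} transposed yields a factor $\|\zeta\|^{-q}$ at the cost of $\vth$-derivatives; a smooth cutoff splitting $\{\|\zeta\|\leq 2\}$ from $\{\|\zeta\|\geq 1\}$ handles the origin of the $M_\vth^{q,\zeta}$ operator, and for $q$ large the $\zeta$-integration becomes absolutely convergent against the $\zeta$-growth $\langle\zeta\rangle^{f_{2,\rho_2,\mu}}$ accumulated in the first step.

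Rapid decay in $\rx$ is extracted from the hypothesis $\langle\psi(\rx,\zeta)\rangle\geq c\langle\rx\rangle^\eps\langle\zeta\rangle^{-r}$: since $v\in\S(\R^n,E_z)$, $\norm{v(\psi(\rx,\zeta))}\leq C_N\langle\rx\rangle^{-N\eps}\langle\zeta\rangle^{Nr}$ for any $N\in\N$, so any prescribed order of $\rx$-decay can be produced at the price of surplus $\zeta$-growth $\langle\zeta\rangle^{Nr}$, which in turn must be reabsorbed by further $\vth$-integrations by parts. A careful accounting of the exponents shows that the residual $\zeta$-growth after $q$ such integrations is of order $\langle\zeta\rangle^{f_{2,\rho_2,\mu}+Nr-q(1-(r/\eps)\rho_1)}$; the inequality $(r/\eps)\rho_1+\rho_2<1$ is exactly the condition ensuring that $q$ can be chosen large enough to beat $Nr$ for arbitrarily large $N$. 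For smoothness in $\rx$, I differentiate under the integral: Leibniz plus the Fa\`a di Bruno formula of Theorem \ref{FaaCS} applied to $v\circ\psi$ expresses $\del_\rx^\a$ of the integrand as a finite sum of integrands of the same shape, with $v$ replaced by $\del^{\nu'}v$ and extra polynomial factors in $(\rx,\zeta)$ produced by the $\O_M$-bounds on $\psi,\tau_1,\tau_2$ and by $\del^{(\a,0,0)}a$; the finiteness of $f_{1,\rho_1,\mu}$, $f_{2,\rho_2,\mu}$, $f_{3,\rho_3,\mu}$ for all $\mu$ is exactly what makes each such term fit the estimates above, yielding continuity $\S(\R^n,E_z)\to\S(\R^n,E_z)$.

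The last sentence of the statement is then immediate: for $a\in\Pi_{\sigma,\kappa,z}^{l,w,m}$ one may take $\rho_1=\rho_2=0$ and any $\rho_3\in[0,1)$, and the four finiteness conditions are trivially satisfied. The main obstacle is the coordinated choice of $p$, $q$ and of the number of $\vth$-integrations by parts, so as to simultaneously produce absolute integrability in $(\zeta,\vth)$ and polynomial $\rx$-decay of any prescribed order, uniformly on bounded sets of Schwartz seminorms of $v$. The sharp inequality $(r/\eps)\rho_1+\rho_2<1$ is precisely the threshold at which this bookkeeping closes; all other ingredients (the $\O_M$-regularity of $\tau_i$ and $\psi$, Fa\`a di Bruno, and Lemma \ref{Stopolisom}) serve to reduce the problem to verifying this critical exponent.
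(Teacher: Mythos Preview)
Your strategy matches the paper's: write $(\Op_\Ga(a)v)(\rx)$ as an oscillatory integral, integrate by parts in $\zeta$ to gain $\langle\vth\rangle$-decay (using $\rho_3<1$), then in $\vth$ to gain $\langle\zeta\rangle$-decay, pull the $\rx$-decay out of $v\circ\psi$ via the lower bound on $\langle\psi\rangle$, and handle $\rx$-derivatives by Leibniz and Fa\`a di Bruno. Three small points deserve correction.

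First, the paper uses $L_\vth^q$ from \eqref{e2pii} rather than $M_\vth^{q,\zeta}$ from \eqref{Mformula}. This produces $\langle\zeta\rangle^{-2q}$ directly and avoids the cutoff near $\zeta=0$ entirely; your approach works but is needlessly complicated here.

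Second, your displayed exponent $f_{2,\rho_2,\mu}+Nr-q(1-(r/\eps)\rho_1)$ is not quite right: each $\vth$-derivative also raises the $\zeta$-exponent of $a$ by at most $\rho_2$, so the coefficient of $q$ should be $-(1-\rho_2-(r/\eps)\rho_1)$, which is exactly what makes the stated inequality $(r/\eps)\rho_1+\rho_2<1$ the sharp threshold. Concretely, after the $\vth$-IBP the $\rx$-exponent of $a$ has grown by $\rho_1\cdot 2q$, so to achieve $\langle\rx\rangle^{-N}$ one must take the Schwartz index of $v$ of order $(N+\rho_1\cdot 2q+\mathrm{const})/\eps$; substituting this into the $\zeta$-count gives the correct coefficient. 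Your conclusion is right, but the intermediate formula as written would only need $(r/\eps)\rho_1<1$.

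Third, you never invoke the second lower-bound hypothesis $\langle\psi(\rx,\zeta)\rangle\geq c_\rx\langle\zeta\rangle^{\eps}$. The paper uses it at the very first step, to certify that for fixed $\rx$ the map $\zeta\mapsto a'(\rx,\zeta,\vth)\,v(\psi(\rx,\zeta))$ is Schwartz (so that the $L_\zeta^p$ integration by parts has no boundary terms). The uniform bound $\langle\psi\rangle\geq c\langle\rx\rangle^{\eps}\langle\zeta\rangle^{-r}$ alone does not force $\langle\psi(\rx,\zeta)\rangle\to\infty$ as $\|\zeta\|\to\infty$ for fixed $\rx$, so this is not redundant.
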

\begin{proof} Let $u,v\in \S(\R^{n},E_z)$. By definition, $\langle\mathfrak{Op}_{\Ga}(a) (v) ,u\rangle = \Op_{\Ga}(a)(u\otimes \ol v)$ and $\Ga(K)= \tau_1\, (K\circ \Phi)\, \tau_2$. Noting $a'(\rx,\zeta,\vth):=\tau_1^*(\rx,\zeta)\,a(\rx,\zeta,\vth)\,\tau_2^*(\rx,\zeta)$, we obtain
\begin{align*}
\langle\mathfrak{Op}_{\Ga}(a) (v) ,u\rangle&:=  \int_{\R^{3n}} e^{2\pi i \langle \vth,\zeta \rangle } \big(\,a'(\rx,\zeta,\vth)\, \,v(\psi(\rx,\zeta))\, \big| \,u(\rx)\,\big) \,  d\zeta\,d\vth\,d\rx \, \\
& =  \int_{\R^{n}} \big(\,g(\rx) \big| \,u(\rx)\,\big) \,  d\rx\ 
\end{align*}
where $g(\rx):=\int_{\R^{2n}} e^{2\pi i \langle \vth,\zeta \rangle } \,a'(\rx,\zeta,\vth)\, \,v\circ \psi(\rx,\zeta)\,d\zeta\,d\vth$.

A computation with the Faa di Bruno formula shows that for any $2n$-multi-index $\nu$, any $N\in \N$ and any $\rx\in \R^n$ there is $C_{\rx,N,\nu}>0$ such that $\norm{\del^\nu (v\circ \psi) (\rx,\zeta)}_{E_z}\leq C_{\rx,N,\nu}\langle \zeta\rangle^{-N}$ uniformly in $\zeta\in \R^n$. As a consequence, the map $\zeta \mapsto \del^{\a',0}a'(\rx,\zeta,\vth)\,\del^{\a-\a'}(v\circ \psi)(\rx,\zeta)$ is in $\S(\R^n,E_z)$. We can thus successively integrate by parts in $g(\rx)$ so that for any $p\in \N^*$,
\begin{align*}
&g(\rx) =\int_{\R^{2n}}e^{2\pi i \langle \vth,\zeta \rangle } \langle\vth \rangle^{-2p}L_{\zeta}^p (a'(v\circ \psi))(\rx,\zeta,\vth)\,d\zeta\,d\vth \, .
\end{align*}
By taking $p$ such that $(\rho_3-1)2p +c_0 \leq -2n$ where $c_\a:=\sup_{\a'\leq \a} f_{3,\rho_3,\a'}$, we see that the previous integrand is absolutely integrable, and we can permute the order of integrations $d\zeta d\vth\to d\vth d\zeta$. Since all the successive $\vth$-derivatives of $\langle\vth \rangle^{-2p}L_{\zeta}^p (a'(v\circ \psi))(\rx,\zeta,\vth)$ converge to 0 when $\langle \vth\rangle$ goes to infinity, we can then integrate by parts in $\vth$ so that for any $q\in \N$ and $p\geq p_0$
$$
g(\rx) =\int_{\R^{2n}}e^{2\pi i \langle \vth,\zeta \rangle }\langle \zeta\rangle^{-2q} L_{\vth}^q(\langle\vth \rangle^{-2p}L_{\zeta}^p (a'(v\circ \psi)))(\rx,\zeta,\vth)\,d\zeta\,d\vth \, .
$$
Noting $h_{p,q}$ the previous integrand, we see that for any $n$-multi-index $\a$,  $\del^\a h_{p,q}$ is a linear combination of terms of the form
$$
e^{2\pi i\langle \vth,\zeta\rangle} \langle \zeta\rangle^{-2q} \langle \vth\rangle^{-2p-|\ga-\ga'|} \del^{\a',\b',\ga'} a' \del^{\a-\a',\b-\b'} v\circ \psi
$$
where $|\ga|\leq 2p$, $\ga'\leq \ga$, $|\b|\leq 2q$, $\b'\leq \b$ and $\a'\leq \a$. A computation with the Faa di Bruno formula shows that for any $2n$-multi-index $\nu$ there is $r_{\nu}\in\N^*$ such that for any $N>0$, there is $C_{\nu,N}>0$ such that for any $w\in \S(\R^n,E_z)$ and any $(\rx,\zeta)\in \R^{2n}$, $\norm{\del^\nu (w\circ \psi)(\rx,\zeta)}_{E_z}\leq C_{\nu,N}\langle\rx,\zeta\rangle^{r_\nu-N}\langle \zeta\rangle^{r_\nu+(r/\eps)N}\sum_{|\nu'|\leq |\nu|} q_{[N/\eps]+1,\nu'}(w)$. Moreover, we check that there is $K_{\a,p}>0$ such that 
\begin{equation*}
\norm{\del^{(\a',\b',\ga')} a'(\rx,\zeta,\vth)}_{L(E_z)} \leq C_{\a,p,q} \langle \rx\rangle^{K_{\a,p}+ \rho_1 2q}\langle \zeta\rangle^{K_{\a,p}+\rho_2 2q} \langle \vth\rangle^{c_\a +\rho_3 2p} \, .
\end{equation*}
As a consequence, we get the estimate
$$
\norm{\del^\a h_{p,q}} \leq C_{\a,p,q,N} \langle\rx \rangle^{K'_{\a,p}+\rho_1 2q -N} \langle \zeta\rangle^{K'_{\a,p}+(\rho_2-1)2q+(r/\eps)N} \langle \vth\rangle^{c_\a+(\rho_3-1)2p} \sum_{|\nu'|\leq |\nu|} q_{[N/\eps]+1,\nu'}(v)\, .
$$
or equivalently, replacing $K'_{\a,p}+\rho_1 2q -N$ by $-N$, 
\begin{align*}
&\norm{\del^\a h_{p,q}} \leq C_{\a,p,q,N} \langle\rx \rangle^{-N} \langle \zeta\rangle^{K''_{\a,p}+(\rho_2-1+(r/\eps)\rho_1)2q+(r/\eps)N} \langle \vth\rangle^{c_\a+(\rho_3-1)2p}\\ 
&\hspace{3cm} \sum_{|\nu'|\leq |\nu|} q_{[N+K'_{\a,p}+\rho_1 2q/\eps]+1,\nu'}(v)\, .
\end{align*}
Fixing now, for a given $N$, $p$ such that $(\rho_3-1)2p+ c_\a \leq -2n$ and $q$ such that $K''_{\a,p}+(\rho_2-1+(r/\eps)\rho_1)2q+(r/\eps)N \leq -2n$, we obtain the result.  
\end{proof}

The following lemma gives a characterization of smoothing kernels in the cases $\sigma=0$ and $\sigma\neq 0$. If $s$ is in a space of symbols and $\Ga$ is a continuous linear map on $\S(\R^{2n},L(E_z))$, we will note $\Op_\Ga(s):=\Op_\Ga((\rx,\zeta,\vth)\mapsto s(\rx,\vth))$. We shall use the Fr\'{e}chet space $\O_{\sg,f,z}^{l,m}$ of smooth functions $a$ in $C^{\infty}(\R^{3n},L(E_z))$ such that for any $\nu:=(\mu,\ga)\in \N^{2n}\times \N^n$ 
$$
\norm{\del^\nu a (\rx,\zeta,\vth)}_{L(E_z)}\leq C_\nu \langle \rx\rangle^{\sg(l+f_1(\mu))} \langle \zeta\rangle^{f_2(\nu)} \langle \vth\rangle^{m+f_3(\mu)}\, .
$$
We will note $\O_{0,f,z}^{l,m}=:\O_{f_2,f_3,z}^{m}$.
Clearly, $\Op_{\Ga}(a)$ (see Lemma \ref{ampliOP}) is defined as an antilinear form on $\S(\R^{2n},L(E_z))$ whenever $a\in \O_{f,z}^{l,m}$ with $m+f_3(0)< -n$.
We note $F$ the set of functions $f_2:\N^{3n}\to \R$ such that there is $\rho<1$ such that for any $(\a,\b)\in \N^{2n}$ $f_{2,\rho,\a,\b}:=\sup_{\ga} f_2(\a,\b,\ga)-\rho|\ga|<\infty$.

\begin{lem}
\label{noyauReste}
Let $K\in \S'(\R^{2n},L(E_z))$, and $\Ga$ a topological isomorphim on $\S(\R^{2n},L(E_z))$ of the form $\Ga=L_{\tau_1}\circ R_{\tau_2}\circ C_\Phi$ with $\tau_1,\tau_2\in \O_M^\times(\R^{2n},GL(E_z))$, $\Phi\in\O^\times_M(\R^{2n},\R^{2n})$. Then 

\noindent (i) Case $\sigma=0$. The following are equivalent:
 
(i-1) There is $f_3:\N^{2n}\to \R$ such that for any $m\leq-f_3(0) -2n$, there exist $f_{2,m}\in F$, $a_m \in \O_{f_{2,m},f_3,z}^{m}$ such that $K=\Op_{\Ga}(a_m)$.

(i-2) $K \in C^\infty(\R^{2n},L(E_z))$ and for any $2n$-multi-index $\nu$, $N\in \N$, there is $C_{\nu,N}>0$ such that for any $(\rx,\zeta)\in \R^{2n}$, $\norm{\del^{\nu} K_{\Ga} (\rx,\zeta)}_{L(E_z)} \leq C_{\nu,N} \langle \zeta \rangle^{-N}$, where $K_{\Ga}:=K\circ \Ga=\wt\tau_1\,K\circ \Phi\,\wt \tau_2\,|J(\Phi)|$. 

(i-3) There is $s\in S^{-\infty}_{0,z}$ such that $K=\Op_{\Ga}(s)$.

\noindent (ii) Case $\sigma>0$. The following are equivalent:

(ii-1) There is $f_1,f_3:\N^{2n}\to \R$ such that for any  $m\leq -f_3(0)-2n$, there exist $f_{2,m}\in F$ and $a_{m} \in \O_{\sg,f_1,f_{2,m},f_3,z}^{m,m}$ such that $K=\Op_\Ga(a_{m})$.

(ii-2) $K\in \S(\R^{2n},L(E_z))$.

(ii-3) There is $s\in S^{-\infty}_z$ such that $K=\Op_{\Ga}(s)$.
\end{lem}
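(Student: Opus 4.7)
The plan is to treat the two cases in parallel, since they differ only in how $\rx$-behavior transfers between kernel and amplitude. In both cases the implications $(\cdot\text{-}3)\Rightarrow(\cdot\text{-}1)$ are trivial because $S^{-\infty}_{0,z}$ (resp.\ $S^{-\infty}_z$) embeds into $\Pi^{l,w,m}_{0,\kappa,z}$ (resp.\ $\Pi^{l,w,m}_{\sigma,\kappa,z}$) as $\vth$-only amplitudes, and these amplitude classes fit the $\O^{l,m}_{\sigma,f,z}$ framework with $m$ arbitrarily negative by the $S^{-\infty}$ definition. So the real work is the cycle $(\cdot\text{-}1)\Rightarrow(\cdot\text{-}2)\Rightarrow(\cdot\text{-}3)$.

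For $(\cdot\text{-}1)\Rightarrow(\cdot\text{-}2)$, I would start from $K=\Op_\Gamma(a_m)$ and compute the distributional kernel of $K\circ\Gamma$. Writing things out one sees that, formally, the kernel of $I(a_m):=\Op_{\Id}(a_m)$ is the partial Fourier transform $h_m(\rx,\zeta)=\int e^{2\pi i\langle\vth,\zeta\rangle}a_m(\rx,\zeta,\vth)\,d\vth$, and $K_\Gamma$ is obtained from $h_m$ by the substitution/multiplication dictated by $\Gamma^{-1}$ (this is the $\wt\tau_1,\wt\tau_2,\Phi,|J(\Phi)|$ formula). Because $a_m$ is smooth, the integral converges absolutely once $m+f_3(0)<-n$, and integration by parts using $\langle\vth\rangle^{-2p}L_\zeta^p e^{2\pi i\langle\vth,\zeta\rangle}=e^{2\pi i\langle\vth,\zeta\rangle}$ turns each power of $\langle\zeta\rangle^{-1}$ into additional $\vth$-regularity requirements that are harmless for a smooth $a_m$. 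Taking arbitrarily many $\zeta$-derivatives then requires integration by parts in $\vth$ using $M^{p,\zeta}_\vth$, which costs factors $\langle\zeta\rangle^{-1}$ balanced by the $f_2$-growth in $\zeta$: the fact that $f_{2,m}\in F$ (i.e.\ controlled modulo loss $\rho<1$ per $\vth$-derivative) is exactly what ensures that one gets \emph{rapid} decay $\O(\langle\zeta\rangle^{-N})$ for every $N$. Once $h_m$ is smooth and rapidly $\zeta$-decaying, the same is true of $K_\Gamma$ because $\Phi\in\O_M^\times$ and the $\wt\tau_i, |J(\Phi)|$ are $\O_M$, so $(\cdot\text{-}2)$ follows; in case $\sigma>0$, the extra $\rx$-decay of $a_m$ coming from $\langle\rx\rangle^{\sigma m}$ (with $m$ arbitrarily negative) gives the Schwartz conclusion $K\in\S(\R^{2n},L(E_z))$.

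For the converse $(\cdot\text{-}2)\Rightarrow(\cdot\text{-}3)$, which is the main point, I define the candidate symbol by a partial Fourier transform in the $\zeta$-variable of $K_\Gamma$:
\begin{equation*}
s(\rx,\vth):=\int_{\R^n}e^{-2\pi i\langle\vth,\zeta\rangle}K_\Gamma(\rx,\zeta)\,d\zeta.
\end{equation*}
In case $(i)$, the rapid $\zeta$-decay of all derivatives of $K_\Gamma$ combined with the fact that no $\rx$-decay is available (only boundedness of $\del^\alpha_\rx K_\Gamma$ uniformly in $\rx$, obtained from the fact that the derivatives of $K_\Gamma$ are themselves of the form $(i\text{-}2)$) yields that $s\in S^{-\infty}_{0,z}$: indeed $\del^\beta_\vth s(\rx,\vth)$ is, via integration by parts against $L^p_\zeta e^{-2\pi i\langle\vth,\zeta\rangle}=\langle\vth\rangle^{2p}e^{-2\pi i\langle\vth,\zeta\rangle}$, bounded by $\langle\vth\rangle^{-2p}$ times an $L^1_\zeta$-norm of $\zeta$-derivatives of $\zeta^\beta K_\Gamma$, which is uniformly bounded in $\rx$. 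In case $(ii)$ the Schwartz character of $K_\Gamma$ is preserved by partial Fourier transform, giving $s\in\S(\R^{2n},L(E_z))=S^{-\infty}_z$. The last point is to check $K=\Op_\Gamma(s)$, which amounts to Fourier inversion: one verifies on $u\in\S$ that the iterated integral defining $\langle\Op_\Gamma(s),u\rangle$ equals $\langle K,u\rangle$ by reinserting $K_\Gamma=\F_P^{-1}s$ and pulling back through $\Gamma$, all integrals being absolutely convergent because of the decay of $s$ and the Schwartz character of $\Gamma(u)$.

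The main obstacle I foresee is the careful bookkeeping of the $\O_M$/Jacobian factors $\wt\tau_i$ and $|J(\Phi)|$ under the pullback by $\Gamma$: one must check that these factors (which lie in $\O_M^\times$) do not destroy, respectively, the Schwartz property (case $\sigma>0$) or the rapid $\zeta$-decay with only polynomial $\rx$-growth (case $\sigma=0$). This follows from Lemma \ref{Ssigma}$(iii)$ and the product/composition stability of the $\O_M$-class, together with the fact that Schwartz functions absorb arbitrary polynomial factors; but writing all the bounds out uniformly is the technically delicate step. Everything else is essentially Fourier inversion and integration by parts applied with the quantitative estimates developed in Propositions \ref{ampliOP} and \ref{amplContinu}.
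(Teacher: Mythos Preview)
Your proposal is correct and follows the same route as the paper: the cycle $(\cdot\text{-}3)\Rightarrow(\cdot\text{-}1)\Rightarrow(\cdot\text{-}2)\Rightarrow(\cdot\text{-}3)$, with the kernel $U(\rx,\zeta)=\int e^{2\pi i\langle\vth,\zeta\rangle}a_m(\rx,\zeta,\vth)\,d\vth$ for the forward step and $s=\F_P(K_\Gamma)$ for the backward one. One small slip to fix: the rapid $\langle\zeta\rangle$-decay of $U$ and its derivatives comes from integrating by parts in $\vth$ via $e^{2\pi i\langle\vth,\zeta\rangle}=\langle\zeta\rangle^{-2p}L_\vth^p\,e^{2\pi i\langle\vth,\zeta\rangle}$ (not $L_\zeta^p$ or $M_\vth^{p,\zeta}$), which transfers $\vth$-derivatives onto $a_m$ --- this is exactly where $f_{2,m}\in F$ with $\rho<1$ is used, as you correctly identify; the paper also selects a different $m=m_\mu$ for each multi-index $\mu$ when differentiating $U$, since $\zeta$-derivatives bring down powers of $\vth$ that must be absorbed by taking $m$ more negative.
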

\begin{proof}

$(i)$ The implication \emph{(i-3)} $\Rightarrow$ \emph{(i-1)} is trivial. We will prove \emph{(i-1)} $\Rightarrow$ \emph{(i-2)} $\Rightarrow$ \emph{(i-3)}. Suppose \emph{(i-1)}. Thus, for any $m\leq -2n-f_3(0)$, there is $f_{2,m}\in F$, $a_m \in \O_{f_{2,m},f_3,z}^{m}$ such that for any $u\in \S(\R^{2n},L(E_z))$, 
$$
\langle K\circ \Ga^{-1},u\rangle = \int_{\R^{3n}} e^{2\pi i \langle\vth ,\zeta\rangle} \Tr\big(a_m(\rx,\zeta,\vth)\, u^*(\rx,\zeta) \big) \, d\zeta\,d\vth\,d\rx\, .
$$
Since $m\leq -2n-f_3(0)$, the preceding integral is absolutely convergent and we can permute the order of integration. As a consequence, we get 
$\langle K\circ \Ga^{-1},u\rangle = \int_{\R^{2n}} \Tr\big(U_m(\rx,\zeta)\, u^*(\rx,\zeta) \big) \, d\zeta\,d\rx$ where $U_m(\rx,\zeta):= \int_{\R^{n}} e^{2\pi i \langle \vth,\zeta \rangle}\,a_m(\rx,\zeta,\vth)\,d\vth$, we check easily that $U_m$ is a continous function on $\R^{2n}$, so we deduce that $U_m=:U$ is independent of $m$ and $K\circ \Ga^{-1}$ is a distribution which is continous function equal to $U$. Noting $b_m:= e^{2\pi i \langle \vth,\zeta \rangle}\,a_m(\rx,\zeta,\vth)$ we see that for any $2n$-multi-index $\mu:=(\a,\b)$, $\del^\mu_{\rx,\zeta} b_m = e^{2\pi i\langle \vth,\zeta\rangle}\sum_{\b'\leq \b}\tbinom{\b}{\b'} (2\pi i \vth)^{\b-\b'} \del^{\a,\b',0}a_m$ and we have then the estimates
$$
\norm{\del^{\mu} b_m } \leq C_{\mu,m}  \langle \zeta\rangle^{\sup_{\b'\leq \b} f_{2,m}(\a,\b',0)} \langle \vth\rangle^{m+c_\mu}
$$
where $c_\mu=\sup_{\b'\leq \b} f_{3}(\a,\b')+|\b|$. Defining $m_{\mu}:= -2n-\sup_{|\mu'|\leq |\mu|} c_{\mu'}$, we see that $U$ is smooth and 
$$
\del^\mu U = \int_{\R^{2n}} \del^\mu b_{m_\mu} d\vth = \sum_{\b'\leq \b}\tbinom{\b}{\b'}(2\pi i)^{|\b-\b'|} \int_{\R^{n}} e^{2\pi i\langle \vth,\zeta\rangle} \vth^{\b-\b'} \del^{\a,\b',0} a_{m_{\mu}}(\rx,\zeta,\vth) \, d\vth\, .
$$
All the $\vth$-derivatives of $\vth\mapsto \vth^{\b-\b'} \del^{\a,\b',0} a_{m_{\mu}}(\rx,\zeta,\vth)$ converge to zero when $\norm{\vth}\to \infty$ so we can we integrate by parts in $\vth$ so that for any $p\in \N$:
$$
\del^\mu U= \sum_{\b'\leq \b}\tbinom{\b}{\b'}(2\pi i)^{|\b-\b'|}  \int_{\R^{n}} e^{2\pi i\langle \vth,\zeta\rangle} \langle \zeta\rangle^{-2p} L_{\vth}^p\big( \vth^{\b-\b'} \del^{\a,\b',0} a_{m_{\mu}}\big)(\rx,\zeta,\vth) \, d\vth\, .
$$
Since $a_{m_{\mu}} \in \O_{f_{2,m_{\mu}},f_3,z}^{m_\mu}$ and $f_{2,m_\mu,\rho_\mu,\la}<\infty$ for a $\rho_\mu<1$, we see that the integrand $h_{p}$ of the previous integral satisfies the estimate
$$
\norm{h_{p}(\rx,\zeta,\vth)} \leq C_{p,\mu} \langle \zeta\rangle^{-2p+\sup_{\b'\leq \b} f_{2,m_{\mu},\rho_\mu,\a,\b'}+2p\rho_\mu} \langle \vth \rangle^{-2n}\, .
$$
Given $N>0$ and fixing $p$ such that $(\rho_\mu-1)2p+\sup_{\b'\leq \b} f_{2,m_{\mu},\rho_\mu,\a,\b'}\leq -N$, we finally obtain that $K\circ \Ga^{-1}=U$ is smooth and satisfies for any $\mu \in \N^{2n}$ and $N>0$, $\norm{\del^\mu K\circ \Ga^{-1} (\rx,\zeta)}_{L(E_z)}\leq C_{\mu,N} \langle \zeta\rangle^{-N}$. We also have for any $u\in \S(\R^{2n},L(E_z))$, $\langle K,u\rangle = \langle U,\Ga(u)\rangle = \int_{\R^{2n}} \Tr(U'(\rx,\zeta) u^* \circ \Phi(\rx,\zeta))d\rx\,d\zeta$ where $U'(\rx,\zeta):=\tau_1^*(\rx,\zeta)U(\rx,\zeta) \tau_2^*(\rx,\zeta)$. 
Using the change of variables provided by the diffeomorphism $\Phi$, we get $\langle K,u\rangle= \int_{\R^{2n}} \Tr(K(\rx,\ry)\,u^*(\rx,\ry))\,d\rx\,d\ry$ where $K(\rx,\ry):= (|J(\Phi^{-1})|(\rx,\ry)) U'\circ \Phi^{-1}(\rx,\ry)$. The result follows. 

\noindent Suppose now \emph{(i-2)}. It is not difficult to see that $\F_P$ sends $S^{-\infty}_{0,z}$ (seen as a subspace of $\S'(\R^{2n},L(E_z))$) into $S^{-\infty}_{0,z}$. In particular, we have $s:=\F_P(K_\Ga) \in S^{-\infty}_{0,z}$. A computation shows that 
$\langle K,u\rangle = \langle \Op_\Ga(s),u\rangle$ for any $u\in \S(\R^{2n},L(E_z))$.

\noindent $(ii)$ Suppose \emph{(i-1)}. Following the proof of $(i)$, we see that it is sufficient to prove that $U$ is in $\S(\R^{2n},L(E_z))$, where $U(\rx,\zeta):=\int_{\R^n} e^{2\pi i \langle \vth,\zeta\rangle} a_{m}(\rx,\zeta,\vth)\, d\vth$ (independent of $m$). Let us fix $N>0$.
For any $2n$-multi-index $\mu:=(\a,\b)$, $\del^\mu_{\rx,\zeta} b_m = e^{2\pi i\langle \vth,\zeta\rangle}\sum_{\b'\leq \b}\tbinom{\b}{\b'} (2\pi i \vth)^{\b-\b'} \del^{\a,\b',0}a_m$ and we have the estimates
$$
\norm{\del^{\mu} b_m } \leq C_{\mu,m} \langle\rx\rangle^{\sg m+\sg d_\mu} \langle \zeta\rangle^{\sup_{\b'\leq \b} f_{2,m}(\a,\b',0)} \langle \vth\rangle^{m+c_\mu}
$$
where $c_\mu=\sup_{\b'\leq \b} f_{3}(\a,\b')+|\b|$ and $d_\mu:=\sup_{\b'\leq \b}f_1(\a,\b')$. Defining 
$$
m_{\mu,N}:= \min\{-2n-\sup_{|\mu'|\leq |\mu|} c_{\mu'},-N/\sg-\sup_{|\mu'|\leq |\mu|}d_{\mu'}\}
$$ 
we see that $U$ is smooth and 
$$
\del^\mu U = \int_{\R^{2n}} \del^\mu b_{m_{\mu,N}} d\vth = \sum_{\b'\leq \b}\tbinom{\b}{\b'}(2\pi i)^{|\b-\b'|} \int_{\R^{n}} e^{2\pi i\langle \vth,\zeta\rangle} \vth^{\b-\b'} \del^{\a,\b',0} a_{m_{\mu,N}}(\rx,\zeta,\vth) \, d\vth\, .
$$
All the $\vth$-derivatives of $\vth\mapsto \vth^{\b-\b'} \del^{\a,\b',0} a_{m_{\mu,N}}(\rx,\zeta,\vth)$ converge to zero when $\norm{\vth}\to \infty$ so we can we integrate by parts in $\vth$ so that for any $p\in \N$:
$$
\del^\mu U= \sum_{\b'\leq \b}\tbinom{\b}{\b'}(2\pi i)^{|\b-\b'|}  \int_{\R^{n}} e^{2\pi i\langle \vth,\zeta\rangle} \langle \zeta\rangle^{-2p} L_{\vth}^p\big( \vth^{\b-\b'} \del^{\a,\b',0} a_{m_{\mu,N}}\big)(\rx,\zeta,\vth) \, d\vth\, .
$$
Since $a_{m_{\mu,N}} \in \O_{\sg,f_1,f_{2,m_{\mu,N}},f_3,z}^{m_{\mu,N},m_{\mu,N}}$ and $f_{2,m_{\mu,N},\rho_{\mu,N},\la}<\infty$ for a $\rho_{\mu,N}<1$, we see that the integrand $h_{p}$ of the previous integral satisfies the estimate
$$
\norm{h_{p}(\rx,\zeta,\vth)} \leq C_{p,\mu,N} \langle \rx\rangle^{-N} \langle \zeta\rangle^{-2p+\sup_{\b'\leq \b} f_{2,m_{\mu,N},\rho_{\mu,N},\a,\b'}+2p\rho_{\mu,N}} \langle \vth \rangle^{-2n}\, .
$$
Fixing $p$ such that $(\rho_{\mu,N}-1)2p+\sup_{\b'\leq \b} f_{2,m_{\mu,N},\rho_{\mu,N},\a,\b'}\leq -N$, we finally obtain the following estimate $\norm{\del^\mu U}_{L(E_z)}\leq C_{\mu,N} \langle \rx\rangle^{-N}\langle \zeta\rangle^{-N}$, which yields \emph{(i-2)}. The other implications are straightforward.
\end{proof}

\begin{cly}
\label{correste} Same hypothesis. We have (for $\sg=0$ or $\sg> 0$), $\Op_{\Ga}(S^{-\infty}_{\sigma,z})=\cap_{l,m}\cup_{w,\ka} \Op_{\Ga}(\Pi_{\sg,\ka,z}^{l,w,m})=\Op_{\Ga}(\Pi_{\sg,z}^{-\infty})$.
\end{cly}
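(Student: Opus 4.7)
The plan is to establish the corollary as three set inclusions, two of them essentially tautological and one requiring the full force of Lemma \ref{noyauReste}. Throughout, I treat the cases $\sigma = 0$ and $\sigma > 0$ in parallel, since they differ only in which branch of the lemma is invoked.

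First I would handle the two easy inclusions. For $\Op_{\Ga}(S^{-\infty}_{\sigma,z}) \subseteq \Op_{\Ga}(\Pi_{\sigma,z}^{-\infty})$, I would use Lemma \ref{Amplisymb}(ii): any symbol $s \in S^{-\infty}_{\sigma,z}$ pulls back to the amplitude $(\rx,\zeta,\vth) \mapsto s(\rx,\vth)$ which lies in $\Pi^{l,0,m}_{\sigma,0,z}$ for every $l,m$, hence in $\cap_{l,m}\cup_{w,\kappa}\Pi^{l,w,m}_{\sigma,\kappa,z} = \Pi^{-\infty}_{\sigma,z}$, and $\Op_\Ga$ applied to this amplitude agrees with $\Op_\Ga(s)$ by definition. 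The inclusion $\Op_{\Ga}(\Pi_{\sigma,z}^{-\infty}) \subseteq \cap_{l,m}\cup_{w,\kappa}\Op_\Ga(\Pi^{l,w,m}_{\sigma,\kappa,z})$ is immediate from the definition of $\Pi^{-\infty}_{\sigma,z}$.

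The real content is the reverse inclusion $\cap_{l,m}\cup_{w,\kappa}\Op_\Ga(\Pi^{l,w,m}_{\sigma,\kappa,z}) \subseteq \Op_\Ga(S^{-\infty}_{\sigma,z})$. Let $K$ belong to the left-hand side. For each pair $(l,m)$ with $m$ sufficiently negative (and, when $\sigma > 0$, taking $l = m$), one gets by hypothesis an amplitude $a_m \in \Pi^{l,w_m,m}_{\sigma,\kappa_m,z}$ with $K = \Op_\Ga(a_m)$. The key observation is that these amplitudes fit the hypothesis of Lemma \ref{noyauReste}(i-1) (resp.\ (ii-1)) with the \emph{fixed} choices $f_3 \equiv 0$ and $f_1 \equiv 0$: indeed, from $|\ga|, |\a+\b| \geq 0$ and $m \leq 0$, the defining estimates of $\Pi^{l,w_m,m}_{\sigma,\kappa_m,z}$ imply $a_m \in \O^{m}_{f_{2,m},0,z}$ (case $\sigma=0$) or $a_m \in \O^{m,m}_{\sigma,0,f_{2,m},0,z}$ (case $\sigma>0$), with $f_{2,m}(\a,\b,\ga) := w_m + \kappa_m|\a+\b|$. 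Since $f_{2,m}$ does not depend on $\ga$, one checks $f_{2,m,\rho,\a,\b} = w_m + \kappa_m|\a+\b|$ is finite for any $\rho \in (0,1)$, so $f_{2,m} \in F$.

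With these identifications, the hypothesis of (i-1) or (ii-1) is satisfied, and the implication (i-1)$\Rightarrow$(i-3), respectively (ii-1)$\Rightarrow$(ii-3), of Lemma \ref{noyauReste} produces a symbol $s \in S^{-\infty}_{\sigma,z}$ with $K = \Op_\Ga(s)$. The expected main obstacle is precisely this matching of data: one must convince oneself that the $l,m$-dependent witnesses can be packaged into the uniform-in-$m$ hypothesis required by the lemma. This works out because the factors involving $|\a+\b|$ and $|\ga|$ in the amplitude estimates only improve decay, allowing $f_1, f_3$ to be taken as zero while absorbing all $\nu$-dependence into $f_{2,m}$, which is harmless since no uniformity in $m$ is demanded of $f_2$.
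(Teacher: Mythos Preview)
Your proposal is correct and is exactly the intended argument: the paper states the corollary without proof as an immediate consequence of Lemma~\ref{noyauReste}, and what you have written is precisely the expected unpacking---the two trivial inclusions via Lemma~\ref{Amplisymb}(ii) and the definition of $\Pi^{-\infty}_{\sigma,z}$, then the substantive inclusion by matching the amplitude data to hypothesis (i-1)/(ii-1) of Lemma~\ref{noyauReste} with $f_1\equiv f_3\equiv 0$ and $f_{2,m}(\alpha,\beta,\gamma)=w_m+\kappa_m|\alpha+\beta|$. Your check that $f_{2,m}\in F$ and your observation that the sign of $-|\gamma|$ and $-|\alpha+\beta|$ allows the absorption into $f_{2,m}$ are the only points requiring care, and you handle them correctly.
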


\begin{lem}
\label{amplireste}
Let $u \in S(\R^{2n},L(E_z))$ and $\b$ a $n$-multi-index.

\noindent (i) For any triple $f:=(f_1,f_2,f_3)$ such that there exists $\rho<1$ such that for any $2n$-multi-index $(\a,\ga)$, $f_{3,\rho,\a,\ga}<\infty$, 
the following linear forms are continuous on $\O_{f,z}$
\begin{align*}
&R_{\b,u} : a\mapsto \int_{\R^{3n}} \zeta^\b e^{2\pi i \langle \vth,\zeta\rangle} \Tr(a(\rx,\zeta,\vth)\,u(\rx,\zeta))\,d\zeta\,d\vth\,d\rx\, ,\\
&S_{\b,u} : a\mapsto  (i/2\pi)^{|\b|}\int_{\R^{3n}} e^{2\pi i\langle \vth,\zeta\rangle} \Tr(\del_{\vth}^\b a(\rx,\zeta,\vth)\,u(\rx,\zeta))\,d\zeta\,d\vth\,d\rx\, .
\end{align*}
\noindent (ii) $R_{\b,u}=S_{\b,u}$ on any $\Pi^{l,w,m}_{\sigma,\kappa,z}$ space.
\end{lem}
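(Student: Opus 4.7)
\textbf{Proof plan for Lemma \ref{amplireste}.}

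My plan is to prove the continuity of $R_{\b,u}$ by adapting the integration-by-parts scheme used in the proof of Proposition \ref{ampliOP}, to deduce the continuity of $S_{\b,u}$ from Proposition \ref{ampliOP} applied to $\del_\vth^\b a$, and finally to establish the equality $R_{\b,u}=S_{\b,u}$ on $\Pi^{l,w,m}_{\sigma,\kappa,z}$ by integration by parts in $\vth$ on a suitable dense subspace and then invoking continuity.

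For part (i), continuity of $R_{\b,u}$: fix $a\in \O_{f,z}$, $(\rx,\vth)\in \R^{2n}$, and observe that the map $\zeta\mapsto \zeta^\b a(\rx,\zeta,\vth) u(\rx,\zeta)$ lies in $\S(\R^n,L(E_z))$ since $u$ is Schwartz and $a$, $\del^\nu a$ grow only polynomially. Using formula (\ref{e2pii}) to write $e^{2\pi i\langle\vth,\zeta\rangle}=\langle\vth\rangle^{-2p}L_\zeta^p e^{2\pi i\langle\vth,\zeta\rangle}$, integrate by parts $p$ times in $\zeta$; the boundary terms vanish by Schwartz decay of $u$. Expanding $L_\zeta^p$ via Leibniz's rule and distributing derivatives among $\zeta^\b$, $a$ and $u$ yields a finite sum of terms bounded (uniformly in $(\rx,\vth,\zeta)$) by a product of $\langle\vth\rangle^{-2p}$, seminorms of the form $q_{0,\b'',0}^f(a)$, Schwartz seminorms of $u$, and a polynomial in $\zeta,\rx$ integrable against $u$. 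Picking $p$ large enough so that $-2p+f_{3,\rho,0,0}+2\rho p\leq -2n$ (using the hypothesis $\rho<1$) makes the $\vth$-integral converge and then similarly one absorbs the $\rx,\zeta$ growth into the Schwartz decay of $u$, giving the desired continuity estimate.

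For continuity of $S_{\b,u}$: the map $a\mapsto \del_\vth^\b a$ is continuous from $\O_{f,z}$ into $\O_{f',z}$, where $f'_3(\a,\b',\ga):=f_3(\a,\b',\ga+\b)$ and $f'_1,f'_2$ are unchanged; note $f'_{3,\rho,0,0}\leq f_{3,\rho,0,\b}<\infty$ by hypothesis. Then $S_{\b,u}(a)=(i/2\pi)^{|\b|}L_{u,\Id}(\del_\vth^\b a)$ in the notation of Proposition \ref{ampliOP}, and continuity follows from Proposition \ref{ampliOP}(ii). Restricted to any $\Pi^{l,w,m}_{\sigma,\kappa,z}$, both linear forms are continuous for the induced (Fr\'echet) topology, since the inclusion $\Pi^{l,w,m}_{\sigma,\kappa,z}\hookrightarrow \O_{f,z}$ is continuous for the natural triple $f$.

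For part (ii), I will use that the inclusion $\Pi^{-\infty,w}_{\sigma,\kappa,z}\subseteq \Pi^{l,w,m}_{\sigma,\kappa,z}$ is dense when the target is equipped with the coarser topology of $\Pi^{l',w,m'}_{\sigma,\kappa,z}$ for any $l'>l$, $m'>m$ (Lemma \ref{topoampli}). On an amplitude $a\in \Pi^{-\infty,w}_{\sigma,\kappa,z}$ all $\vth$-derivatives of $a(\rx,\zeta,\vth)u(\rx,\zeta)$ decay faster than any polynomial in $\vth$ uniformly in $(\rx,\zeta)$ on compact sets, and the iterated integrals converge absolutely; hence Fubini applies and integration by parts $|\b|$ times in $\vth$, using $\zeta^\b e^{2\pi i\langle\vth,\zeta\rangle}=(i/2\pi)^{|\b|}\del_\vth^\b e^{2\pi i\langle\vth,\zeta\rangle}$, is fully justified with vanishing boundary terms. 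This yields $R_{\b,u}(a)=S_{\b,u}(a)$ on $\Pi^{-\infty,w}_{\sigma,\kappa,z}$. Both forms being continuous on $\Pi^{l,w,m}_{\sigma,\kappa,z}$ (and a fortiori on the coarser topology inherited from $\Pi^{l',w,m'}_{\sigma,\kappa,z}$), the equality extends by density to all of $\Pi^{l,w,m}_{\sigma,\kappa,z}$.

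The main technical obstacle will be the bookkeeping for the continuity of $R_{\b,u}$: the polynomial factor $\zeta^\b$ introduces extra growth that must be tamed simultaneously with the growth of $a$ and with the $\vth$-integration, requiring a careful choice of the integration-by-parts order $p$ in terms of $|\b|$, $\rho$, and the triple $f$. Everything else reduces to applying Proposition \ref{ampliOP}, the density result of Lemma \ref{topoampli}, and standard dominated-convergence/Fubini arguments.
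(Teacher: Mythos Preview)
Your proposal is correct and, for $S_{\b,u}$ and for part (ii), matches the paper's argument essentially verbatim (continuity of $a\mapsto\del_\vth^\b a$ into a shifted $\O_{f',z}$, then Proposition~\ref{ampliOP}; equality on $\Pi^{-\infty,w}_{\sigma,\kappa,z}$ via Fubini and integration by parts in $\vth$, then density from Lemma~\ref{topoampli}).

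The only difference is in your treatment of $R_{\b,u}$. You plan to rerun the integration-by-parts scheme of Proposition~\ref{ampliOP} while carrying the extra polynomial $\zeta^\b$ through the estimates; this works but is more labor than needed. The paper simply observes that $u_\b(\rx,\zeta):=\zeta^\b u(\rx,\zeta)$ is again in $\S(\R^{2n},L(E_z))$, so that $R_{\b,u}=L_{u_\b,\Id}$ in the notation of Proposition~\ref{ampliOP}, and continuity follows immediately from that proposition with no additional bookkeeping. Absorbing the polynomial into the Schwartz test function avoids precisely the ``main technical obstacle'' you anticipated.
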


\begin{proof} $(i)$ The continuity of $R_{\b,u}$ is a direct consequence of Proposition \ref{ampliOP} since $R_{\b,u}=L_{u_\b,\Id}$ where $u_\b(\rx,\zeta):=\zeta^\b u(\rx,\zeta)$. Suppose that $\nu_0$ is a $3n$-multi-index, we denote $f^{\nu_0}:=\nu\mapsto f(\nu+\nu_0)$. A computation shows for any $\rho$, and $n$-multi-indices $\a,\ga$, $f_{3,\rho,\a,\ga}^{\nu_0} \leq f_{3,\rho,\a+\a_0,\ga+\ga_0} + \rho|\b_0|$. Thus if there is $\rho<1$ such that for any $2n$-multi-index $(\a,\ga)$, $f_{3,\rho,\a,\ga}<\infty$, then for any $2n$-multi-index $(\a,\ga)$, $f^{\nu_0}_{3,\rho,\a,\ga}<\infty$. If $a \in \O_{f,z}$ then $\del^{\nu_0} a \in \O_{f^{\nu_0},z}$ and the linear map $a\mapsto \del^{\nu_0} a$ is continuous.  As a consequence, since $S_{\b,u}=L_{u,\Id}\circ D_\b$, where $D_\b:=(i/2\pi)^{\b}\del^\b_{\vth}$, the continuity of $S_{\b,u}$ on $\O_{f,z}$ follows from Proposition \ref{ampliOP}.

\noindent $(ii)$ The equality is easily obtained on $\Pi_{\sigma,\kappa,z}^{-\infty,w}$ by an integration by parts in $\vth$ and permutations of the order of integration $d\zeta d\vth \to d\vth d\zeta$ in $R_{\b,u}(a)$ (authorized for $a\in \Pi_{\sigma,\kappa,z}^{-\infty,w}$). The result now follows from $(i)$ and the density result of Lemma \ref{topoampli}.
\end{proof}

If $N\geq 1$ and $\b,\ga,$ $n$-multi-indices, we denote for any amplitude $a\in \Pi^{l,w,m}_{\sigma,\kappa,z}$, the smooth function $a_{\b,\ga,N}$ as $a_{\b,\ga,N}(\rx,\zeta,\vth):=\int_{0}^1 (1-t)^{N} (\del^{(0,\b,\ga)}a) (\rx,t\zeta,\vth) \,dt$.
It is straightforward to check that the linear map $a\mapsto a_{\b,\ga,N}$ is continuous from $\Pi_{\sg,\ka,z}^{l,w,m}$ into $\Pi_{\sg,\ka,z}^{l-|\b|,|w|+\ka|\b|,m-|\ga|}$.

The following lemma shows that $\la$-quantization of amplitudes and symbols yields the same operators. This result of ``reduction'' of amplitudes to symbols will be important for Theorem \ref{lambdainv} and thus, for a $\la$-invariant definition of pseudodifferential operators.  

\begin{lem}
\label{reduction} 
\noindent (i) For any $a\in \Pi_{\sigma,\kappa,z}^{l,w,m}$, $(\del^{0,\b,\b} a)_{\zeta=0} \in S^{l-|\b|,m-|\b|}_{\sigma,z}$ for any $n$-multi-index $\b$.  

\noindent (ii) Let $\Ga$ be as in Lemma \ref{noyauReste} and let $a\in \Pi_{\sigma,\kappa,z}^{l,w,m}$. Then for any symbol $s\in S^{l,m}_{\sigma,z}$ such that $s\sim \sum_{\b} \tfrac{(i/2\pi)^{|\b|}}{\b!} (\del^{0,\b,\b} a)_{\zeta=0}$, there is $r \in S^{-\infty}_{\sigma,z}$ such that $\Op_{\Ga}(a)=\Op_{\Ga}(s+r)$. In particular there exists an unique symbol $s(a)\in S^{l,m}_{\sg,z}$ such that $\Op_{\Ga}(a)=\Op_{\Ga}(s(a))$.  Moreover, we have $s(a)\sim \sum_{\b} \tfrac{(i/2\pi)^{|\b|}}{\b!} (\del^{0,\b,\b} a)_{\zeta=0}$.

\noindent (iii) Suppose that $(M,\exp,E,d\mu )$ has a $S_\sigma$-bounded geometry and $\psi$ is a $\O_M$-linearization. 
Let $a\in \Pi_{\sigma,\kappa,z}^{l,w,m}$, $\la\in [0,1]$ and $(z,\bfr)$ be given a frame. Then there exists an unique symbol $s_\la(a)\in S^{l,m}_{\sigma}$ such that $\Op_{\Ga_{\la,z,\bfr}}(a)=(\Op_{\la}(s_\la(a))_{z,\bfr}$. Moreover, we have $T_{z,\bfr,*}(s_\la(a))\sim \sum_{\b} \tfrac{(i/2\pi)^{|\b|}}{\b!} \mu^{-1} (\del^{0,\b,\b} a)_{\zeta=0}$.
\end{lem}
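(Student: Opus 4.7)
The plan is to treat the three items in sequence, with (i) being essentially tautological, (ii) forming the core of the argument, and (iii) reducing mechanically to (ii).

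Part (i) I would handle by direct inspection: differentiating $(\del^{0,\b,\b}a)_{\zeta=0}$ in $(\a,\ga)$ gives $(\del^{(\a,\b,\b+\ga)}a)(\rx,0,\vth)$, and feeding $\zeta=0$ into the defining estimate of $\Pi^{l,w,m}_{\sg,\ka,z}$ (so the $\langle\zeta\rangle$ factor becomes $1$) produces exactly the $S^{l-|\b|,m-|\b|}_{\sg,z}$ bound.

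For (ii), the key idea is to Taylor expand $a$ in the $\zeta$ variable about $0$: for each $N\geq 1$,
\[
a(\rx,\zeta,\vth)=\sum_{|\b|<N}\frac{\zeta^\b}{\b!}(\del^{0,\b,0}a)(\rx,0,\vth)+N\sum_{|\b|=N}\frac{\zeta^\b}{\b!}a_{\b,0,N-1}(\rx,\zeta,\vth).
\]
Applying $\Op_\Ga$ and invoking Lemma \ref{amplireste}(ii) to absorb each $\zeta^\b$ as $(i/2\pi)^{|\b|}\del^\b_\vth$ acting on whatever follows, the leading Taylor terms produce the series $\sum_{|\b|<N}\tfrac{(i/2\pi)^{|\b|}}{\b!}\Op_\Ga((\del^{0,\b,\b}a)_{\zeta=0})$ while the remainder becomes $N\sum_{|\b|=N}\tfrac{(i/2\pi)^{|\b|}}{\b!}\Op_\Ga(a_{\b,\b,N-1})$. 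By part (i) each summand of the leading series lies in $S^{l-|\b|,m-|\b|}_{\sg,z}$, and Lemma \ref{asympt} (grouping by $|\b|$) furnishes a symbol $s\in S^{l,m}_{\sg,z}$ with the claimed asymptotic expansion, so that $s_{\geq N}:=s-\sum_{|\b|<N}\tfrac{(i/2\pi)^{|\b|}}{\b!}(\del^{0,\b,\b}a)_{\zeta=0}\in S^{l-N,m-N}_{\sg,z}$.

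The main obstacle is to show that $\Op_\Ga(a-s)$ is smoothing. Subtracting the two expressions yields $\Op_\Ga(a-s)=\Op_\Ga(b_N)$ for every $N$, where $b_N:=N\sum_{|\b|=N}\tfrac{(i/2\pi)^{|\b|}}{\b!}a_{\b,\b,N-1}-s_{\geq N}$. A routine estimate on $a_{\b,\b,N-1}$ (using $\langle t\zeta\rangle\leq\langle\zeta\rangle$ where the exponent is non-negative, and $\langle t\zeta\rangle^{\text{neg}}\leq 1$ otherwise) places $b_N$ in the class $\O^{l-N,m-N}_{\sg,f_1,f_{2,N},f_3,z}$ with $f_1(\mu):=-|\mu|$, $f_3(\mu):=0$ independent of $N$, and $f_{2,N}(\nu)=\max(0,w+\ka(N+|\mu|))\in F$. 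For any $m_0\leq -2n=-f_3(0)-2n$, choosing $N(m_0)\geq\max(l,m)-m_0$ and using the evident inclusion $\O^{l',m'}\subseteq\O^{m_0,m_0}$ when $l',m'\leq m_0$, the amplitude $b_{N(m_0)}$ fits the hypothesis of Lemma \ref{noyauReste}(ii-1) (in the case $\sg>0$) or (i-1) (in the case $\sg=0$). Hence by (ii-3) or (i-3) there exists $r\in S^{-\infty}_{\sg,z}$ with $\Op_\Ga(a-s)=\Op_\Ga(r)$; set $s(a):=s+r$. Uniqueness of $s(a)$ follows from the injectivity of $\Op_\Ga$ on symbols, visible by Fourier inversion on the explicit kernel formula $K_{\Op_\Ga(s')}=\wt\tau_1\,\F_P^{-1}(s')\circ\Phi^{-1}\,\wt\tau_2\,|J(\Phi^{-1})|$ (the representation used in Lemma \ref{noyauReste}).

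Part (iii) I would deduce from (ii) via Remark \ref{Oplien}. Since $\psi$ is an $\O_M$-linearization, Lemma \ref{OMlem} shows $\Ga_{\la,z,\bfr}$ has the structure required in Lemma \ref{noyauReste}, so (ii) applies and produces a unique symbol $s(a)\in S^{l,m}_{\sg,z}$ with $\Op_{\Ga_{\la,z,\bfr}}(a)=\Op_{\Ga_{\la,z,\bfr}}(s(a))$. Because $d\mu$ is a $S^\times_\sg$-density, multiplication by $\mu_{z,\bfr}^{\pm1}$ is a topological isomorphism on $S^{l,m}_{\sg,z}$, so defining $s_\la(a):=T^{-1}_{z,\bfr,*}(\mu^{-1}s(a))$ is unambiguous and $(\Op_\la(s_\la(a)))_{z,\bfr}=\Op_{\Ga_{\la,z,\bfr}}(\mu\,s_\la(a)_{z,\bfr})=\Op_{\Ga_{\la,z,\bfr}}(s(a))=\Op_{\Ga_{\la,z,\bfr}}(a)$. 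Multiplying the asymptotic expansion of $s(a)$ by $\mu^{-1}$ term by term (legitimate because each $S^{l-|\b|,m-|\b|}_{\sg,z}$ is stable under multiplication by $\mu^{-1}$) yields the stated expansion for $T_{z,\bfr,*}(s_\la(a))$.
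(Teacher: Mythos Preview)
Your proof is correct and follows essentially the same route as the paper: Taylor expand $a$ in $\zeta$ about $0$, convert each $\zeta^\b$ into $(i/2\pi)^{|\b|}\del_\vth^\b$ via Lemma~\ref{amplireste}(ii), and show the resulting remainder $b_N$ (the paper's $r_N$, with an index shift) lies in amplitude classes of arbitrarily negative order so that the smoothing characterization applies. The only cosmetic difference is that the paper invokes Corollary~\ref{correste} (which packages the statement $\Op_\Ga(\Pi^{-\infty}_{\sg,z})=\Op_\Ga(S^{-\infty}_{\sg,z})$), whereas you verify the hypotheses of Lemma~\ref{noyauReste} directly; your handling of part~(iii) via Remark~\ref{Oplien} and the $S_\sg^\times$-density property is also exactly what the paper does, just spelled out more explicitly.
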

\begin{proof}
$(i)$ is a direct consequence of Lemma \ref{Amplisymb} $(i)$.

\noindent $(ii)$ Using a Taylor expansion of $a$ at $\zeta=0$, we find that for any $u\in \S(\R^{2n},L(E_z))$, $N\in \N^*$, $\langle \Op_{\Ga}(a),u\rangle=\sum_{0\leq |\b|\leq N} I_{\b} + \sum_{|\b|=N+1}\tfrac{N+1}{\b!} R_{\b,N}$ where 
\begin{align*}
&I_\b:= \int_{\R^{3n}} \zeta^\b e^{2\pi i \langle\vth,\zeta \rangle } \Tr\big(\tfrac{1}{\b!}(\del^{(0,\b,0)}a)_{\zeta=0}(\rx,\vth) \Ga(u)^*(\rx,\zeta)\big) d\zeta\,d\vth\,d\rx \, ,\\
&R_{\b,N}:= \int_{\R^{3n}} \zeta^\b e^{2\pi i \langle\vth,\zeta \rangle } \Tr\big(a_{\b,0,N}(\rx,\zeta,\vth)\, \Ga(u)^*(\rx,\zeta)\big) d\zeta\,d\vth\,d\rx \, .
\end{align*}
We get from Lemma \ref{amplireste} $(ii)$,
$$
I_\b= \int_{\R^{3n}} e^{2\pi i \langle\vth,\zeta \rangle } \Tr\big(\tfrac{(i/2\pi)^{|\b|}}{\b!}(\del^{(0,\b,\b)}a)_{\zeta=0}(\rx,\vth) \Ga(u)^*(\rx,\zeta)\big) d\zeta\,d\vth\,d\rx\, .
$$
Let $s\in S^{l,m}_{\sigma,z}$ be a symbol such that $s\sim \sum_{\b} \tfrac{(i/2\pi)^{|\b|}}{\b!} (\del^{0,\b,\b} a)_{\zeta=0}$. Then noting $s_N:=s-\sum_{|\b |\leq N} \tfrac{(i/2\pi)^{|\b|}}{\b!} (\del^{0,\b,\b} a)_{\zeta=0}\in S^{l-(N+1),m-(N+1)}_{\sigma,z}$, we find with Lemma \ref{amplireste} $(ii)$ that  $\Op_{\Ga}(a-s) = \Op_{\Ga}(r_N)$ where 
$$
r_N:= \sum_{|\b|=N+1} \tfrac{(N+1)(i/2\pi)^{N+1}}{\b!} a_{\b,\b,N} -s_N\, .
$$
We check that $r_N\in \Pi_{\sigma,\kappa,z}^{l-(N+1),w_{N},m-(N+1)}$ where $w_N=|w|+\kappa(N+1)$. Corollary \ref{correste} applied to $\Op_{\Ga}(a-s)$ now implies that there is $r \in S^{-\infty}_{\sigma,z}$ such that $\Op_{\Ga}(a)=\Op_{\Ga}(s+r)$. As a consequence, there exists $s(a)\in S^{l,m}_{\sg,z}$ such that $\Op_{\Ga}(a)=(\Op_{\Ga}(s(a))$. The unicity is a direct consequence of the fact that $\Op_{\Ga}=\Ga^*\circ \F_P^*$ on $\S'(\R^{2n},L(E_z))$. 

\noindent $(iii)$ Direct consequence of $(ii)$ and that fact that $(\Op_{\la}(s))_{z,\bfr}=\Op_{\Ga_{\la,z,\bfr}}(\mu_{z,\bfr} s_{z,\bfr})$.
\end{proof}

\subsection{$S_\sigma$-linearizations}\label{Ssigsec}

In order to have a full symbol-operator isomorphism, a polynomial control at infinity on the linearization is not enough. As we shall see, a stronger, ``amplitude-like'' control on the $\psi_{z}^{\bfr}$ maps and a local equivalent of the $P_{x,\xi}$ parallel transport linear isomorphisms (see Remark \ref{remTP}) appears to be crucial for pseudodifferential calculus on $(M,\exp,E)$ and the $\la$-invariance (see Theorem \ref{lambdainv}).

We define $H_{\sigma,\kappa}^w(\mathfrak{E})$ (resp. $E_{\sigma,\kappa}^w(\mathfrak{E})$), where $w\in\R$, $\sigma\in [0,1]$ and $\kappa\geq 0$, as the space of smooth functions $g$ from $\R^{2n}$ into $\mathfrak{E}$ such that for any $2n$-multi-index $\nu$, there exists $C_\nu>0$
such that for any $(\rx,\zeta)\in \R^{2n}$, $\norm{\del^{\nu} g (\rx,\zeta)}\leq C_{\nu} \langle \rx\rangle^{-\sigma (|\nu|-1)}\langle \zeta \rangle^{w+\kappa(|\nu|-1)} $  (if $\nu\neq0$) (resp. $\norm{\del^{\nu} g (\rx,\zeta)}\leq C_{\nu} \langle \rx\rangle^{-\sigma |\nu|}\langle \zeta \rangle^{w+\kappa|\nu|})$. We note $H_{\sigma,\kappa}(\mathfrak{E}):=\cup_{w\in \R} H_{\sigma,\kappa}^w(\mathfrak{E})$, $H_{\sigma}(\mathfrak{E}):=\cup_{\ka\geq 0 } H_{\sigma,\kappa}(\mathfrak{E})$, $E_{\sigma,\kappa}(\mathfrak{E})=\cup_{w\in \R} E_{\sigma,\kappa}^w(\mathfrak{E})$ and $E_{\sigma}(\mathfrak{E})=\cup_{\ka\geq 0} E_{\sigma,\kappa}(\mathfrak{E})$.
Remark that by Leibniz rule, $E_{\sigma,\kappa}(\R)$ and $E_{\sigma,\kappa}(\mathcal{M}_{p}(\R))$ are $\R$-algebras (graduated by the parameter $w$) while $E_{\sigma,\kappa,z}:=E_{\sigma,\kappa}(L(E_z))$ is a $\C$-algebra (under pointwise matricial product). Thus, if $P\in E_{\sigma,\kappa}(\M_p(\R))$, then $\det P \in E_{\sg,\ka}(\R)$. Note also that $f\in H_{\sg,\ka}(\mathfrak{E})$ if and only if for any $i\in \set{1,\cdots,2n}$, $\del_i f \in E_{\sg,\ka}(\mathfrak{E})$. In particular, $f\in H_{\sg,\ka}(\R^p)$ if and only if $df:=(\rx,\zeta)\mapsto (df)_{\rx,\zeta}$ is in $E_{\sg,\ka}(\M_{p,2n}(\R))$. As a consequence, if $f\in H_{\sg,\ka}(\R^{2n})$, its Jacobian determinant $J(g)$ is in $E_{\sg,\ka}(\R)$. Note that any function in $E_{\sg,\ka}^0(\mathfrak{E})$ is bounded and if $f\in H_{\sg,\ka}^0(\mathfrak{E})$ then there is $C>0$ such that $\norm {f(\rx,\zeta)}_{\mathfrak{E}} \leq C \langle \rx,\zeta\rangle$ for any $(\rx,\zeta)\in \R^{2n}$.
The following lemma will give us the behaviour of the  $E_{\sg,\ka}$ and $H_{\sg,\ka}$ spaces under composition.

\begin{lem}
\label{lemGsigma}
(i) Let $f\in H_{\sigma,\kappa}^{w'}(\mathfrak{E})$ (resp. $E^{w'}_{\sigma,\kappa}(\mathfrak{E})$) and $g\in H^w_{\sigma,\kappa}(\R^{2n})$ such that there exists $C,c>0$, $r\geq 0$, such that $\langle g_1(\rx,\zeta) \rangle \geq c \langle \rx\rangle \langle \zeta \rangle^{-r}$ (if $\sg\neq 0$) and $\langle g_2(\rx,\zeta)\rangle \leq C\langle \zeta \rangle$ for any $(\rx,\zeta)\in \R^{2n}$, where $g=:(g_1,g_2)$. Then $f\circ g \in H_{\sg,\ka+|w|+r\sg}^{|w|+|w'|}(\mathfrak{E})$ (resp. $E_{\sg,\ka+|w|+r\sg}^{|w'|}(\mathfrak{E})$).

\noindent (ii) If $P\in E^w_{\sg,\ka}(\M_{n}(\R))$, then $(\rx,\zeta)\mapsto P_{\rx,\zeta}(\zeta) \in H^{w+\ka+1}_{\sg,\ka}(\R^n)$.

\noindent (iii) Let $f\in G_{\sigma}(\R^{n},\mathfrak{E})$  and $g\in H^w_{\sigma,\kappa}(\R^n)$ such that there exists $c>0$, $r\geq 0$, such that, if $\sg\neq 0$, $\langle g(\rx,\zeta) \rangle \geq c \langle \rx\rangle \langle \zeta \rangle^{-r}$ for any $(\rx,\zeta)\in \R^{2n}$. Then $f\circ g\in H_{\sg,\max\set{r\sg,\ka}+|w|}^{|w|}(\mathfrak{E})$. Moreover, if $f\in G_{\sigma}(\R^n,\R^p)$, then $df\circ g \in E_{\sg,\max\set{r\sg,\ka}+|w|}^{0}(\M_{p,n}(\R))$.
\end{lem}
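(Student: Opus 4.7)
The plan is to attack all three parts by the multivariate Faà di Bruno formula of Theorem~\ref{FaaCS}, combined with careful bookkeeping of the exponents that arise in $H^w_{\sg,\ka}$ and $E^w_{\sg,\ka}$. Throughout I shall use the shorthand $P_{\nu,\la}(g)=\prod_{j=1}^s(\del^{l^j}g)^{k^j}$ for the standard Faà di Bruno polynomial in the derivatives of $g$, with $|k^j|>0$, $|l^j|>0$, $\sum k^j=\la$, $\sum|k^j|\,l^j=\nu$.

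For part (i), I would first deal with $\nu\neq 0$ and write $\del^\nu(f\circ g)=\sum_{1\leq|\la|\leq|\nu|}(\del^\la f)\circ g\cdot P_{\nu,\la}(g)$. Since $|l^j|\geq 1$, each factor of $P_{\nu,\la}(g)$ is controlled by the $H^w_{\sg,\ka}$-estimate on $g$, yielding
\[
|P_{\nu,\la}(g)(\rx,\zeta)|\leq C\,\langle\rx\rangle^{-\sg(|\nu|-|\la|)}\,\langle\zeta\rangle^{w|\la|+\ka(|\nu|-|\la|)}.
\]
For the composite derivative, the hypothesis $\langle g_1\rangle\geq c\langle\rx\rangle\langle\zeta\rangle^{-r}$ with exponent $-\sg(|\la|-1)\leq 0$ gives
\[
\langle g_1\rangle^{-\sg(|\la|-1)}\leq c^{-\sg(|\la|-1)}\,\langle\rx\rangle^{-\sg(|\la|-1)}\,\langle\zeta\rangle^{r\sg(|\la|-1)},
\]
while $\langle g_2\rangle\leq C\langle\zeta\rangle$ together with $\langle g_2\rangle\geq 1$ gives a uniform bound $\langle g_2\rangle^{w'+\ka(|\la|-1)}\leq K\langle\zeta\rangle^{|w'|+\ka(|\la|-1)}$ (treating the signs of $w'$ and $w'+\ka(|\la|-1)$ by splitting into cases). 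Multiplying and summing, the $\rx$-exponent telescopes to $-\sg(|\nu|-1)$, and the $\zeta$-exponent becomes $|w'|+\ka(|\nu|-1)+r\sg(|\la|-1)+|w||\la|$; since $|\la|\leq|\nu|$ and all constants are non-negative, this is bounded by $|w|+|w'|+(\ka+|w|+r\sg)(|\nu|-1)$, which is exactly what the $H^{|w|+|w'|}_{\sg,\ka+|w|+r\sg}$ estimate requires. The $E$-version is obtained by including $\la=0$ and using the $E^{w'}_{\sg,\ka}$ bound at $\nu=0$, $\nu\neq 0$ separately; the case $\nu=0$ gives $|f\circ g|\leq C\langle\zeta\rangle^{|w'|}\leq C\langle\zeta\rangle^{|w|+|w'|}$.

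For part (ii), this is an elementary Leibniz computation: writing $f_i(\rx,\zeta)=\sum_j P_{ij}(\rx,\zeta)\,\zeta_j$ and differentiating, only two types of terms appear, namely $\del^\nu P_{ij}\cdot\zeta_j$ and (when $\nu_{n+j}\geq 1$) terms of the form $\del^{\nu-e_{n+j}}P_{ij}$. The $E^w_{\sg,\ka}$-estimate on $P$ then bounds both contributions by $C\langle\rx\rangle^{-\sg(|\nu|-1)}\langle\zeta\rangle^{w+\ka|\nu|+1}$, which matches the $H^{w+\ka+1}_{\sg,\ka}$ statement.

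Part (iii) follows the same Faà di Bruno strategy as (i), but the analysis of the $\zeta$-exponent has a slightly different combinatorics. From the $G_\sigma$ hypothesis one gets $|\del^\la f(g(\rx,\zeta))|\leq C\langle g\rangle^{-\sg(|\la|-1)}\leq C\langle\rx\rangle^{-\sg(|\la|-1)}\langle\zeta\rangle^{r\sg(|\la|-1)}$ for $|\la|\geq 1$, and $P_{\nu,\la}(g)$ is controlled as in (i) with the same exponent $|w||\la|+\ka(|\nu|-|\la|)$. The main technical step is the identity
\[
r\sg(|\la|-1)+\ka(|\nu|-|\la|)=\ka(|\nu|-1)+(r\sg-\ka)(|\la|-1),
\]
which after splitting into the two cases $r\sg\geq\ka$ and $r\sg<\ka$ yields the sharp bound $\leq \max(r\sg,\ka)(|\nu|-1)$; combined with $|w||\la|\leq|w|+|w|(|\nu|-1)$ this gives the required $H^{|w|}_{\sg,\max(r\sg,\ka)+|w|}$ estimate. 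For the second half of (iii), observe that for $f\in G_\sg(\R^n,\R^p)$ the Jacobian $df$ satisfies $|\del^\mu(df)|\leq C\langle\rv\rangle^{-\sg|\mu|}$ for \emph{all} $\mu$ including $\mu=0$, so the $E$-type bound $|df\circ g|\leq C$ at $\nu=0$ is immediate, and for $\nu\neq 0$ the same Faà di Bruno plus case-split argument gives $\rx$-exponent $-\sg|\nu|$ and $\zeta$-exponent $\leq (\max(r\sg,\ka)+|w|)|\nu|$, as required.

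The main obstacle I anticipate is purely bookkeeping: correctly combining a lower bound on $\langle g_1\rangle$ (used with a non-positive exponent, so inequalities flip) with an upper bound on $\langle g_2\rangle$ (used with a potentially negative exponent that must be dominated by $|w'|$). Once this sign analysis is done uniformly in $\la$, the telescoping of the $\rx$-exponent to $-\sg(|\nu|-1)$ and the convexity-style inequality $r\sg(|\la|-1)+\ka(|\nu|-|\la|)\leq\max(r\sg,\ka)(|\nu|-1)$ are straightforward, and all three statements drop out.
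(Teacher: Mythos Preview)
Your proposal is correct and follows essentially the same approach as the paper: Fa\`a di Bruno for parts (i) and (iii), and Leibniz for part (ii), with the same $P_{\nu,\la}(g)$ estimate and the same handling of the composite $\langle g_1\rangle$, $\langle g_2\rangle$ bounds. The only cosmetic wrinkle is your remark in (i) about ``including $\la=0$'' for the $E$-version: the Fa\`a di Bruno sum is still over $1\leq|\la|\leq|\nu|$; what actually changes is that you must separately verify the $\nu=0$ estimate (which you do), and that the $E^{w'}_{\sg,\ka}$ bound on $\del^\la f$ carries the sharper exponent $-\sg|\la|$ rather than $-\sg(|\la|-1)$, so the telescoping yields $-\sg|\nu|$ instead of $-\sg(|\nu|-1)$.
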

\begin{proof}$(i)$ The Faa di Bruno formula yields for any $2n$-multi-index $\nu\neq 0$,
\begin{equation}
\label{eqFdB}
\del^\nu (f\circ g) = \sum_{1\leq |\la|\leq |\nu|} (\del^\la f)\circ g\  P_{\nu,\la}(g)
\end{equation}
where $P_{\nu,\la}(g)$ is a linear combination (with coefficients independent of $f$ and $g$) of functions of the form $\prod_{j=1}^s (\del^{l^j} g)^{k^j}$ where $s\in \set{1,\cdots ,|\nu|}$.  
The $k^j$ and $l^j$ are $2n$-multi-indices (for $1\leq j\leq s$) such that $|k^j|>0$, $|l^j|>0$, $\sum_{j=1}^s k^j = \la$ and $\sum_{j=1}^s |k^j| l^j= \nu$. As a consequence, since $g\in H_{\sigma,\kappa}^w(\R^{2n})$, we see that for each $\nu,\la$ with $1\leq |\la|\leq |\nu|$ there exists $C_{\nu,\la}>0$ such that for any $(\rx,\zeta)\in \R^n$,
\begin{equation}\label{pnulag}
|P_{\nu,\la}(g) (\rx,\zeta)|\leq C_{\nu,\la} \langle \rx\rangle^{-\sigma(|\nu|-|\la|)} \langle\zeta \rangle^{w|\la|+\kappa(|\nu|-|\la|)} \, . 
\end{equation}
Moreover, since $f\in H_{\sigma,\kappa}^{w'}(\R^{2n})$ (resp. $E_{\sigma,\kappa}^{w'}(\R^{2n})$), there is $C'_\la>0$ such that for any $(\rx,\zeta)\in \R^{2n}$, the estimate $\norm{(\del^\la f) \circ g(\rx,\zeta)}\leq C'_\la \langle \rx\rangle ^{-\sigma(|\la|-1)} \langle \zeta \rangle^{|w'|+(\kappa+r\sg)(|\la|-1)}$ (resp.  
 $\norm{(\del^\la f) \circ g(\rx,\zeta)}\leq C'_\la \langle \rx\rangle ^{-\sigma|\la|} \langle \zeta \rangle^{|w'|+(\kappa+r\sg)|\la|}$) is valid. We deduce then from (\ref{eqFdB}) and (\ref{pnulag}) that $f\circ g$ belongs to $H_{\sg,\ka+|w|+r\sg}^{w+|w'|}(\mathfrak{E})$ (resp. $E_{\sg,\ka+|w|+r\sg}^{|w'|}(\mathfrak{E})$).
 
\noindent $(ii)$ We note $P^{i,j}_{\rx,\zeta}$ the matrix entries of $P_{\rx,\zeta}$. Each component $(f^{i})_{1\leq i \leq n}$ of the map $f:=(\rx,\zeta)\mapsto P_{\rx,\zeta}(\zeta)$ is of the form $f^i= \sum_{j=1}^n P^{i,j}\, \zeta_j$. It is straightforward to check that the applications $(\rx,\zeta)\mapsto \zeta_j$ satify for any $\nu\in \N^{2n}$, $\del^\nu \zeta_j = \O(\langle \zeta\rangle^{1-|\nu|}\langle \rx\rangle^{\sg(1-|\nu|)})$. The result now follows from an application of the Leibniz rule.

 \noindent $(iii)$ Following the proof of $(i)$, (\ref{pnulag}) is still valid, this time with $\la$ as $n$-multi-indices and $\nu$ as $2n$-multi-indices with $1\leq |\la|\leq |\nu|$. Using the fact that $\langle g(\rx,\zeta) \rangle \geq c \langle \rx\rangle \langle \zeta \rangle^{-r}$ for any $(\rx,\zeta)\in \R^{2n}$, we obtain the following estimate
 $$
 \norm{(\del^\la f) \circ g(\rx,\zeta)}\leq C'_\la \langle \rx\rangle ^{-\sigma(|\la|-1)} \langle \zeta \rangle^{r\sigma(|\la|-1)}\leq C'_\la \langle \rx\rangle ^{-\sigma(|\la|-1)} \langle \zeta \rangle^{\max\set{r\sg,\ka}(|\la|-1)}
 $$
which, with (\ref{pnulag}) and (\ref{eqFdB}), yields $f\circ g$ belongs to $H_{\sg,\max\set{r\sg,\ka}+|w|}^{|w|}(\mathfrak{E})$. The fact that $df\circ g$ is in $E_{\sg,\max\set{r\sg,\ka}+|w|}^{0}(\M_{p,n}(\R))$ when $f\in G_{\sg}(\R^n,\R^p)$ is based on the same argument.
\end{proof}

The $H_{\sg,\ka}$ and $E_{\sg,\ka}$ spaces are related to the symbol and amplitude spaces by the following lemma.

\begin{lem}
\label{HEamp}
\noindent (i) If $f\in E^w_{\sg,\ka,z}$, then $(\rx,\zeta,\vth)\mapsto f(\rx,\zeta)$ is in $\Pi^{0,w,0}_{\sg,\ka,z}$.

\noindent (ii) Let $s\in S^{l,m}_{\sigma,z}$, $m\in H^w_{\sg,\ka}(\R^n)$ such that there exist $C,c,r>0$ such that, if $\sg\neq 0$, for any $(\rx,\zeta)\in \R^{2n}$, $ c \langle \rx\rangle \langle \zeta \rangle^{-r}\leq \langle m(\rx,\zeta)\rangle \leq C \langle \rx \rangle \langle \zeta\rangle^{r}$, and $P\in E^{0}_{\sg,\ka}(\M_n(\R))$ such that for any $(\rx,\zeta,\vth)\in \R^{3n}$, $\langle P_{\rx,\zeta}(\vth)\rangle \geq c \langle \vth \rangle$. Then $(\rx,\zeta,\vth)\mapsto s(m(\rx,\zeta),P_{\rx,\zeta}(\vth))$ is in $\Pi^{l,\sigma r|l|,m}_{\sigma,\ka+|\sg r-\ka+w|,z}$.

\noindent (iii) If $s\in S_\sigma(\R^n)$, $m\in H^w_{\sg,\ka}(\R^n)$ such that, if $\sg\neq 0$, there exists $c,r>0$ such that for any $(\rx,\zeta)\in \R^{2n}$ $\langle m(\rx,\zeta)\rangle \geq c \langle \rx\rangle \langle \zeta \rangle^{-r}$, then $(\rx,\zeta,\vth)\mapsto s(m(\rx,\zeta)) \Id_{L(E_z)}$ is in $\Pi_{\sg,\ka+|\sg r-\ka+w|,z}^{0,0,0}$.

\noindent (iv) If $a \in \Pi^{l,w,m}_{\sg,\ka,z}$ and $P\in E^{0}_{\sg,\ka}(\M_n(\R))$ is such that there is $c> 0$ such that for any $(\rx,\zeta,\vth)\in \R^{3n}$, $\langle P_{\rx,\zeta}(\vth)\rangle \geq c \langle \vth \rangle$, then $a_P:(\rx,\zeta,\vth)\mapsto a(\rx,\zeta,P_{\rx,\zeta}(\vth)) \in \Pi^{l,w,m}_{\sg,\ka,z}$.
\end{lem}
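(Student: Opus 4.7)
My plan is to treat the four items as successive applications of the Faa di Bruno formula, using Lemma \ref{lemGsigma} as the main tool to estimate the composed derivatives, and the Leibniz rule to handle products.

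For (i), the statement is essentially a reformulation of definitions: since the function $F(\rx,\zeta,\vth):=f(\rx,\zeta)$ has $\del_\vth^\gamma F=0$ for $\gamma\neq 0$, only derivatives of the form $(\alpha,\beta,0)$ contribute, and the $E_{\sg,\ka}^w$ estimate $\norm{\del^{\alpha+\beta}f(\rx,\zeta)}\leq C_{\alpha,\beta}\langle\rx\rangle^{-\sg|\alpha+\beta|}\langle\zeta\rangle^{w+\ka|\alpha+\beta|}$ is exactly the $\Pi^{0,w,0}_{\sg,\ka,z}$ estimate (with the term $m-|\gamma|=0$ disappearing).

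For (ii), I would write $F(\rx,\zeta,\vth):=s(m(\rx,\zeta),P_{\rx,\zeta}(\vth))$ and apply Faa di Bruno twice: first view $F=s\circ G$ with $G(\rx,\zeta,\vth):=(m(\rx,\zeta),P_{\rx,\zeta}(\vth))$, then use Lemma \ref{lemGsigma} (ii)--(iii) to control derivatives of the pieces. Specifically, $(\rx,\zeta)\mapsto m(\rx,\zeta)$ sits in $H_{\sg,\ka}^w(\R^n)$ with $\langle m\rangle\asymp \langle\rx\rangle$ (up to $\langle\zeta\rangle^{\pm r}$ factors), and $(\rx,\zeta,\vth)\mapsto P_{\rx,\zeta}(\vth)$ has derivatives that decompose via Leibniz into terms where each factor is either a derivative of $P$ (which is in $E_{\sg,\ka}^0$) or a component of $\vth$ (which contributes a factor $\langle\vth\rangle$). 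The key estimates I will need are: a pure $\vth$-derivative of $F$ of order $\gamma$ brings down a factor $(\del_{\vth_i}P_{\rx,\zeta})\cdots$ times $(\del_2^{|\gamma|}s)\circ G$, giving $\langle\rx\rangle^{\sg(l-|\alpha+\beta|)}\langle\vth\rangle^{m-|\gamma|}$ after using $\langle P_{\rx,\zeta}(\vth)\rangle\geq c\langle\vth\rangle$ for the symbol estimate on $s$; mixed $\rx,\zeta$-derivatives will generate the $\langle\zeta\rangle^{\sg r|l|+(\ka+|\sg r-\ka+w|)|\alpha+\beta|}$ weight because the lower bound $\langle m\rangle\geq c\langle\rx\rangle\langle\zeta\rangle^{-r}$ must be used when $l>0$ (this is exactly where the $\sg r|l|$ in the exponent comes from). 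Collecting terms and tracking the worst power of $\langle\zeta\rangle$ from each branch of the Faa di Bruno expansion gives the claimed amplitude class.

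Item (iii) is a degenerate case of (ii) in which $s$ depends only on the first variable and $P$ is absent; the same Faa di Bruno computation, now relative to $m$ alone and using $f\in G_\sg(\R^n,\mathfrak E)\subseteq S_\sg^{0,0}$, gives the bound, with the exponent in $\langle\zeta\rangle$ controlled by $\ka+|\sg r-\ka+w|$ as in the proof of Lemma \ref{lemGsigma} (iii). For item (iv), I write $b(\rx,\zeta,\vth):=a(\rx,\zeta,P_{\rx,\zeta}(\vth))$; a $\gamma$-derivative in $\vth$ yields $(\del_\vth^{\gamma'}a)\circ(\Id,P)$ (with $|\gamma'|\leq|\gamma|$) multiplied by products of $\del_\vth$-derivatives of $P$, which are themselves in $E_{\sg,\ka}^0$, while mixed derivatives in $(\rx,\zeta,\vth)$ bring in $\del_{\rx,\zeta}^{(\alpha',\beta')}P$ factors lying in $E_{\sg,\ka}^{-\sg(|\alpha'+\beta'|-1)}$; combining with the amplitude estimate for $a$ and the lower bound $\langle P_{\rx,\zeta}(\vth)\rangle\geq c\langle\vth\rangle$ (to convert $\langle P_{\rx,\zeta}(\vth)\rangle^{m-|\gamma'|}$ into $\langle\vth\rangle^{m-|\gamma|}$ after absorbing $\langle\vth\rangle^{|\gamma-\gamma'|}$ into the $P$-derivative factors) gives exactly the $\Pi^{l,w,m}_{\sg,\ka,z}$ estimate.

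The main obstacle, shared by (ii) and (iv), is the bookkeeping for the $\langle\zeta\rangle$-exponents when differentiating in $(\rx,\zeta)$: the $H_{\sg,\ka}^w$ assumption on $m$ produces an extra $\langle\zeta\rangle^{w+\ka(|\nu|-1)}$ factor at each order, and the lower bound $\langle m\rangle\geq c\langle\rx\rangle\langle\zeta\rangle^{-r}$ (needed to feed back into the symbol estimate of $s$) costs another $\langle\zeta\rangle^{\sg r|l|}$ when $l\geq 0$. I expect that cleanly verifying that these two contributions combine to exactly $\ka+|\sg r-\ka+w|$ in the $|\alpha+\beta|$-exponent and $\sg r|l|$ in the $w$-exponent of $\Pi_{\sg,\bullet,z}^{l,\bullet,m}$ will require a careful case split on the sign of $\sg r-\ka+w$, mirroring the step in Lemma \ref{lemGsigma} (i) where $\max\{r\sg,\ka\}$ appears. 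Everything else reduces to the estimates already established in Lemma \ref{lemGsigma}.
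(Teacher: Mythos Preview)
Your proposal is correct and follows essentially the same route as the paper: for (ii) and (iv) the paper also writes the composed map as $s\circ g$ with $g(\rx,\zeta,\vth)=(m(\rx,\zeta),P_{\rx,\zeta}(\vth))$ (resp.\ $g(\rx,\zeta,\vth)=(\rx,\zeta,P_{\rx,\zeta}(\vth))$), applies Faa di Bruno once, splits the inner multi-indices according to whether $l^{j,3}=0$ or $|l^{j,3}|=1$, and then feeds in the two-sided bound on $\langle m\rangle$ to control $(\del^\la s)\circ g$. The anticipated case split on the sign of $\sg r-\ka+w$ is not actually needed: once you have the $\langle\zeta\rangle$-exponent $\sg r|l|+(\sg r+w)|\la^1|+\ka(|\a+\b|-|\la^1|)$ with $|\la^1|\le|\a+\b|$, the linear maximum over $|\la^1|$ immediately gives $\sg r|l|+(\ka+|\sg r-\ka+w|)|\a+\b|$.
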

\begin{proof}
\noindent $(i)$ is straightforward.

\noindent $(ii)$ Let us note $g(\rx,\zeta,\vth):=(m(\rx,\zeta),P_{\rx,\zeta}(\vth))$. For any $i,j\in \set{1,\cdots, n}$, we denote $P_{\rx,\zeta}^{i,j}$ the $(i,j)$ matrix entry of $P_{\rx,\zeta}$. Since $P\in E^{0}_{\sg,\ka}(\mathcal{M}_n(\R))$, we have $P_{\cdot,\cdot}^{i,j}\in E^{0}_{\sg,\ka}(\R)$. Faa di Bruno formula in Theorem \ref{FaaCS} yields for any $\nu\neq0$ 
\begin{equation}
\label{delnusg}
\del^{\nu} (s\circ g)  =  \sum_{1\leq |\la|\leq |\nu|} (P_{\nu,\la}(g)) \ (\del^{\la}s)\circ g  
\end{equation}
where $P_{\nu,\la}(g)$ is a linear combination of terms of the form $\prod_{j=1}^{s} (\del^{l^j} g)^{k^j} $, where $1\leq s\leq |\nu|$,  
the $k^j$ (resp. $l^j$) are $2n$-multi-indices (resp. $3n$-multi-indices) with $|k^j|>0$, $|l^j|>0$, $\sum_{j=1}^s k^j = \la$ and $\sum_{j=1}^s|k^j|l^j=\nu$. 
Let us note $l^{j}=:(l^{j,1},l^{j,2},l^{j,3})$, $k^{j}=:(k^{j,1},k^{j,2})$ where $l^{j,1},l^{j,2},l^{j,3},k^{j,1},k^{j,2}$ are $n$-multi-indices. We have, noting $Q(\rx,\zeta,\vth):=(\rx,\zeta)$,
$$
(\del^{l^j}g)^{k^j}= \prod_{i=1}^n (\delta_{l^{j,3},0}(\del^{(l^{j,1},l^{j,2})} m)_i\circ Q )^{k^{j,1}_i}\ \prod_{i=1}^n \big(\sum_{k=1}^n \del^{(l^{j,1},l^{j,2})}P^{i,k}_{\cdot,\cdot}\ \del^{l^{j,3}} \vth_{k}\big)^{k^{j,2}_i}
$$
and we get, for a given $s$, $(l^{j})$, $(k^j)$ such that $(\del^{l^j} g)^{k^j} \neq 0$ for all $1\leq j\leq s$,
\begin{align*}
&\text{ if } l^{j,3}=0\, ,  \qquad   (\del^{l^j}g)^{k^j} = \O(\langle \rx\rangle^{-\sigma|l^j||k^{j}|+\sigma|k^{j,1}|}\langle \zeta\rangle^{\ka|l^j||k^{j}|-\ka|k^{j,1}|+w|k^{j,1}|}\langle \vth\rangle^{|k^{j,2}|})\, , \\ 
&\text{ if } |l^{j,3}|=1\, ,\qquad k^{j,1}=0 \text{ and } \    (\del^{l^j}g)^{k^j} = \O(\langle \rx \rangle^{-\sigma |l^j||k^{j}|+\sg|k^j|}\langle \zeta\rangle^{\ka |l^j||k^j|-\ka|k^j|})\, .
\end{align*}

The case $|l^{j,3}|>1$ is excluded since $k^{j}\neq 0$  and $(\del^{l^j}g)^{k^j} \neq 0$.
By permutation on the $j$ indices, we can suppose as in the proof of Lemma \ref{cchange} that for $1\leq j\leq j_1-1$, we have $l^{j,3}=0$ and for $j_1\leq j\leq s$, we have $|l^{j,3}|=1$, where $1\leq j_1\leq s+1$. Thus, we get 
\begin{align*}
&\prod_{j=1}^s (\del^{l^j} g)^{k^j} = \O(\langle\rx \rangle^{-\sg(\sum_{j=1}^{s}(|l^j|-1)|k^j| + \sum_{j=1}^{j_1-1}|k^{j,2}|)}\\
&\hspace{3cm}\times\langle \zeta\rangle^{w\sum_{j=1}^{s}|k^{j,1}|+\kappa(\sum_{j=1}^{s}(|l^j|-1)|k^j| + \sum_{j=1}^{j_1-1}|k^{j,2}|)}\langle \vth \rangle^{\sum_{j=1}^{j_1-1}|k^{j,2}|})\, .
\end{align*}
We check that $\sum_{j=1}^{j_1-1} |k^{j,2}|=|\la^2|-|\ga|$ and $\sum_{j=1}^{s}(|l^j|-1)|k^j| = |\nu|-|\la|$ where $\la=(\la^1,\la^2)$ and $\nu=(\a,\b,\ga)$. As a consequence, 
\begin{equation}
\label{pnu2}
P_{\nu,\la}(g) = \O(\langle \rx\rangle^{-\sigma(|\a+\b|-|\la^1|)}\langle \zeta\rangle^{w|\la^1|+\ka(|\a+\b|-|\la^1|)} \langle \vth \rangle^{|\la^2|-|\ga|}) \, .
\end{equation}
Since there exist $C,c>0$ such that for any $(\rx,\zeta)\in \R^{2n}$ $\langle m(\rx,\zeta)\rangle \leq C \langle \rx\rangle \langle\zeta\rangle^{r}$ and $\langle m(\rx,\zeta)\rangle \geq c \langle \rx\rangle \langle \zeta\rangle^{-r}$, we see that there is $K_\nu>0$ such that for any $1\leq |\la|\leq |\nu|$ and any $(\rx,\zeta)\in \R^{2n}$,
$\langle m(\rx,\zeta)\rangle^{\sigma(l-|\la^1|)}\leq K_\nu \langle \rx \rangle^{\sg(l-|\la^1|)} \langle \zeta \rangle^{\sg r|l|+\sg r|\la^1|}$. As a consequence, we see that there is $C_\nu>0$ such that for any $1\leq |\la|\leq |\nu|$ and any $(\rx,\zeta,\vth)\in \R^{3n}$,
$$
\norm{(\del^\la s)\circ g (\rx,\zeta,\vth)}_{L(E_z)} \leq C_\nu \langle \rx \rangle^{\sg(l-|\la^1|)} \langle \zeta \rangle^{\sg r|l|+\sg r|\la^1|} \langle \vth \rangle^{m-|\la^2|}.
$$
Thus, since we can reduce the sum in (\ref{delnusg}) to $2n$-multi-indices $\la$ such that $|\la^2|\geq |\ga|$ (and thus $|\la^1|\leq |\a+\b|$), we obtain the result from (\ref{pnu2}) and a straightforward verification of the case $\nu=0$. 

\noindent $(iii)$ is obtain exactly as $(ii)$ (with $P_{\rx,\zeta}=\Id$), since $(\rx,\zeta)\mapsto \mu_{z,\bfr}(\rx)\Id_{L(E_z)} \in S^{0,0}_{\sigma,z}$. The hypothesis $ m(\rx,\zeta) = \O (\langle \rx\rangle \langle \zeta\rangle^r)$ is not necessary since $l=0$ here.

\noindent $(iv)$ We have, noting $g(\rx,\zeta,\vth):=(\rx,\zeta,P_{\rx,\zeta}(\vth))$, for any $3n$-multi-indices $\nu\neq 0$, $1\leq |\nu'|\leq |\nu|$, 
$P_{\nu,\nu'}(g)$ as a linear combination of terms of the form $\prod_{j=1}^s (\del^{l^j}g)^{k^j}$, with $\sum_{j=1}^{s}|k^j|l^j=\nu$ and $\sum_{j=1}^s k^j=\nu'$, noting $k^j=(k^{j,1},k^{j,2}), l^{j}=(l^{j,1},l^{j,2})$, where $k^{j,1}$ and $l^{j,1}$ are $2n$-multi-indices, we get, following the proof of $(ii)$, 
$$
P_{\nu,\nu'}(g)=\O(\langle \rx\rangle^{-\sigma(|\a+\b|-|\a'+\b'|)}\langle \zeta\rangle^{\ka(|\a+\b|-|\a'+\b'|)} \langle \vth \rangle^{|\ga'|-|\ga|})\, .
$$
Since $P_{\rx,\zeta}=\O(1)$ and $\langle P_{\rx,\zeta}(\vth)\rangle \geq \eps\langle \vth\rangle$ we get the result.
\end{proof}

\begin{defn}
\label{sprime} Let $\sigma \in [0,1]$ and $\psi$ a linearization on $(M,\exp,E,d \mu)$. We say that $\psi$ is a $S_\sg$-linearization if for any frame $(z,\bfr)$, 
there is $\ka_{z,\bfr}\geq 0$ such that

\noindent $(i)$ $\psi_z^\bfr \in H_{\sg,\ka_{z,\bfr}}(\R^n)$ with ${\psi_z^\bfr}(\rx,\zeta)=\O(\langle \rx\rangle \langle \zeta\rangle^{r})$ for a $r\geq 1$ and $\ol{\psi_z^\bfr} \in \O_M(\R^{2n},\R^n)$ ,

\noindent $(ii)$ there is $P^{z,\bfr}\in C^\infty(\R^{2n}, GL_n(\R))$ such that $P^{z,\bfr}$ and  $(P^{z,\bfr})^{-1}$ are in $E^0_{\sg,\ka_{z,\bfr}}(\M_n(\R))$, and for any $(\rx,\zeta)\in \R^{2n}$, $P_{\rx,\zeta}^{z,\bfr}(\zeta)=\Ups_{1,T}^{z,\bfr}(\rx,\zeta)$ and $P_{\rx,0}^{z,\bfr}=\Id_{\R^n}$. 

\noindent $(iii)$ $\tau_{1}^{z,\bfr}$ and $(\tau_1^{z,\bfr})^{-1}$ are in $E^0_{\sg,\ka_{z,\bfr}}(L(E_z))$.

\noindent We shall say that the combo $(M,\exp,E,d\mu,\psi)$ has a $S_{\sigma}$-bounded geometry if this is the case of $(M,\exp,E,d\mu)$ and $\psi$ is a $S_\sg$-linearization. 
\end{defn}

It is clear that a $S_{\sg}$-linearization is also a $\O_M$-linearization. Moreover, in case of $S_\sg$-bounded geometry, we check the properties $(i)$, $(ii)$ and $(iii)$ in just one frame:

\begin{lem} If $(M,\exp,E,d\mu)$ has a $S_\sg$-bounded geometry and $\psi$ is a linearization such that there exists $(z_0,\bfr_0)$, $\ka_{z_0,\bfr_0}\geq 0$, such that the functions $\psi_{z_0}^{\bfr_0}$, ${\ol \psi}_{z_0}^{\bfr_0}$ satisfy $(i)$, $(ii)$ and $(iii)$, then $\psi$ is a $S_\sg$-linearization.
\end{lem}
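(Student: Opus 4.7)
The plan is to transport each of the three defining conditions of a $S_\sg$-linearization from the distinguished frame $(z_0,\bfr_0)$ to an arbitrary frame $(z,\bfr)$ by writing every object at $(z,\bfr)$ as a composition involving the corresponding object at $(z_0,\bfr_0)$ and the normal-coordinate transition data. Set $\varphi:=\psi_{z_0,z}^{\bfr_0,\bfr}$: the ambient $S_\sg$-bounded geometry supplies $\varphi,\varphi^{-1}\in G_\sg^\times(\R^n)$ via $(S_\sg 1)$, the Jacobian $d\varphi$ and its inverse in $E^0_{\sg,0}(\M_n(\R))$, and, via $(S_\sg 2)$ together with Lemma \ref{Ssigma}, the maps $(\tau_z^{-1}\tau_{z_0})\circ(n_z^\bfr)^{-1}$ and its inverse in $E^0_{\sg,0}(L(E_{z_0},E_z))$. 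The lifted tangential transition then takes the simple form $\psi_{z_0,z,T}^{\bfr_0,\bfr}(\rx,\zeta)=(\varphi(\rx),(d\varphi)_\rx(\zeta))$, whose two components sit respectively in $H^0_{\sg,0}$ and, by Lemma \ref{lemGsigma}(ii), in $H^1_{\sg,0}$. These are precisely the inputs demanded by the compositional calculus of Lemmas \ref{Ssigma} and \ref{lemGsigma}.

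For condition $(i)$, I would write $\psi_z^\bfr=\varphi^{-1}\circ\psi_{z_0}^{\bfr_0}\circ\psi_{z_0,z,T}^{\bfr_0,\bfr}$. Using $\langle\varphi(\rx)\rangle\geq c\langle\rx\rangle$ from Lemma \ref{B1-lem}(i) together with $\langle(d\varphi)_\rx\zeta\rangle\leq C\langle\zeta\rangle$, Lemma \ref{lemGsigma}(i) applies with $f=\psi_{z_0}^{\bfr_0}\in H_{\sg,\ka_{z_0,\bfr_0}}(\R^n)$ and delivers $\psi_{z_0}^{\bfr_0}\circ\psi_{z_0,z,T}^{\bfr_0,\bfr}\in H_{\sg,\ka'}(\R^n)$; a further application of Lemma \ref{lemGsigma}(iii) with $f=\varphi^{-1}\in G_\sg$ then places $\psi_z^\bfr\in H_{\sg,\ka_{z,\bfr}}(\R^n)$ for a suitable $\ka_{z,\bfr}\geq 0$. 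The polynomial growth bound $\psi_z^\bfr(\rx,\zeta)=\O(\langle\rx\rangle\langle\zeta\rangle^r)$ is obtained by chaining the individual polynomial bounds of the three factors. The property $\ol\psi_z^\bfr\in\O_M(\R^{2n},\R^n)$ follows from the transition identity
\[
\ol\psi_{z,\rx}^\bfr(\ry)=(d\varphi)_\rx^{-1}\cdot\ol\psi_{z_0}^{\bfr_0}\circ\psi_{z_0,z,M^2}^{\bfr_0,\bfr}(\rx,\ry),
\]
combined with $\ol\psi_{z_0}^{\bfr_0}\in\O_M$, the $\O_M^\times$-character of $d\varphi$, and Lemma \ref{Ssigma}(iii).

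For $(ii)$, conjugating $\Ups_1$ across the tangential transition produces the explicit candidate
\[
P^{z,\bfr}_{\rx,\zeta}:=\bigl((d\varphi)_{\psi_z^\bfr(\rx,\zeta)}\bigr)^{-1}\cdot P^{z_0,\bfr_0}_{\varphi(\rx),(d\varphi)_\rx\zeta}\cdot(d\varphi)_\rx,
\]
which automatically satisfies $P^{z,\bfr}_{\rx,\zeta}(\zeta)=\Ups_{1,T}^{z,\bfr}(\rx,\zeta)$ and, since $\psi_z^\bfr(\rx,0)=\rx$ and $P^{z_0,\bfr_0}_{\cdot,0}=\Id$, also $P^{z,\bfr}_{\rx,0}=\Id$. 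The Leibniz rule together with the three facts (a) $d\varphi,(d\varphi)^{-1}\in E^0_{\sg,0}(\M_n)$, (b) $P^{z_0,\bfr_0},(P^{z_0,\bfr_0})^{-1}\in E^0_{\sg,\ka_{z_0,\bfr_0}}$ composed with $(\varphi,(d\varphi)\zeta)$ via Lemma \ref{lemGsigma}(i), and (c) $d\varphi$ and $(d\varphi)^{-1}$ composed with $\psi_z^\bfr$ via Lemma \ref{lemGsigma}(iii), will yield $P^{z,\bfr},(P^{z,\bfr})^{-1}\in E^0_{\sg,\ka_{z,\bfr}}(\M_n(\R))$. Condition $(iii)$ is handled identically by starting from the clean formula
\[
\tau_1^{z,\bfr}(\rx,\zeta)=[\tau_z^{-1}\tau_{z_0}]\circ(n_z^\bfr)^{-1}\bigl(\psi_z^\bfr(\rx,\zeta)\bigr)\cdot\tau_1^{z_0,\bfr_0}\bigl(\varphi(\rx),(d\varphi)_\rx\zeta\bigr)\cdot[\tau_{z_0}^{-1}\tau_z]\circ(n_z^\bfr)^{-1}(\rx),
\]
obtained by specializing $\tau^{z,\bfr}(\rx,\ry)$ to $\ry=\psi_z^\bfr(\rx,\zeta)$ and using $\varphi\circ\psi_z^\bfr(\rx,\zeta)=\psi_{z_0}^{\bfr_0}(\varphi(\rx),(d\varphi)_\rx\zeta)$ together with $\ol\psi=\psi^{-1}$. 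The main obstacle is the auxiliary polynomial lower bound $\langle\psi_z^\bfr(\rx,\zeta)\rangle\geq c\langle\rx\rangle\langle\zeta\rangle^{-r}$ that is required whenever Lemma \ref{lemGsigma}(iii) is invoked with $g=\psi_z^\bfr$ (for $(d\varphi)\circ\psi_z^\bfr$ in $(ii)$, for the $\tau_z^{-1}\tau_{z_0}$ factor in $(iii)$, and already in $(i)$ when composing with $\varphi^{-1}$): this bound is not visible in the upper bound supplied by $(i)$ and must be established separately by a two-regime argument, using the Taylor identity $\psi_z^\bfr(\rx,\zeta)-\rx=\int_0^1(\del_\zeta\psi_z^\bfr)(\rx,t\zeta)\zeta\,dt$ with the $H_{\sg,\ka}$-estimate on $\del_\zeta\psi_z^\bfr$ when $\langle\rx\rangle$ dominates a suitable power of $\langle\zeta\rangle$, and the defining identity $\ol\psi_z^\bfr(\rx,\psi_z^\bfr(\rx,\zeta))=\zeta$ together with $\ol\psi_z^\bfr\in\O_M$ in the complementary regime.
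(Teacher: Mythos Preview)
Your approach is correct and is exactly what the paper's one-line proof (``This follows from applications of Lemma~\ref{lemGsigma}'') leaves implicit: transport each of the three conditions from $(z_0,\bfr_0)$ to an arbitrary frame via the transition formulas, and feed the pieces into the compositional calculus of Lemma~\ref{lemGsigma}. Your transition identities for $\psi_z^\bfr$, $\ol\psi_z^\bfr$, $P^{z,\bfr}$ and $\tau_1^{z,\bfr}$ are all correct.

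One comment on the ``main obstacle''. You are right that the lower bound $\langle\psi_z^\bfr(\rx,\zeta)\rangle\ge c\langle\rx\rangle\langle\zeta\rangle^{-r}$ is needed each time Lemma~\ref{lemGsigma}(iii) is invoked with $g=\psi_z^\bfr$ (or with the intermediate composite in step~$(i)$). Your two-regime argument works, but the second regime is in fact trivial: once $\langle\rx\rangle\le C\langle\zeta\rangle^{w+1}$, choosing $r\ge w+1$ makes $\langle\rx\rangle\langle\zeta\rangle^{-r}$ bounded, and $\langle\cdot\rangle\ge 1$ finishes it --- the identity involving $\ol\psi_z^\bfr\in\O_M$ is not needed. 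Alternatively, and more in the paper's spirit, the lower bound at $(z_0,\bfr_0)$ follows directly from condition~$(ii)$ there via the invertibility of $\Ups_1$: since $\Ups_{-1,z_0,\bfr_0}=(\psi_{z_0}^{\bfr_0}\circ I_{1,-1},\,P_{-1}(\cdot)\,)$ has both components of polynomial growth, applying it to $\Ups_{1,z_0,\bfr_0}(\rx,\zeta)$ yields $\langle(\rx,\zeta)\rangle\le C\langle\psi_{z_0}^{\bfr_0}(\rx,\zeta)\rangle\langle P_{\rx,\zeta}(\zeta)\rangle^{r}\le C'\langle\psi_{z_0}^{\bfr_0}(\rx,\zeta)\rangle\langle\zeta\rangle^{r}$, which is exactly the argument the paper records later as Lemma~\ref{lem-Phi-la}(ii). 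Transporting this bound through $\varphi$ then gives it at every frame.
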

\begin{proof} This follows from applications of Lemma \ref{lemGsigma}.
\end{proof}

\begin{rem} The condition $(ii)$ in Definition \ref{sprime} encodes an abstract parallel transport isomorphisms in normal coordinates. Indeed, in the case where the linearization $\psi$ is derived from a connection on $M$, the $GL_n(\R)$-valued smooth functions on $\R^{2n}$: $P^{z,\bfr}:=(\rx,\zeta)\mapsto  M^\bfr_{z,\exp\circ (n_{z,T}^\bfr)^{-1}(\rx,\zeta)} P_{(n_{z,T}^\bfr)^{-1}(\rx,\zeta)} ( M^\bfr_{z,(n_{z}^\bfr)^{-1}(\rx)})^{-1}$ where the applications $P_{x,\xi}$ are the parallel transport isomorphisms on the tangent bundle (see Remark \ref{remTP}), satisfy for any $(\rx,\zeta)\in \R^{2n}$, $P_{\rx,\zeta}^{z,\bfr}(\zeta)=\Ups_{1,T}^{z,\bfr}(\rx,\zeta)$ and $P_{\rx,0}^{z,\bfr}=\Id_{\R^n}$. Thus, in this case, $(ii)$ is satisfied if  $P^{z,\bfr}$ and  $(P^{z,\bfr})^{-1}$ are in $E_{\sg,\ka_{z,\bfr}}^0(\M_n(\R))$ for a $\ka_{z,\bfr}\geq 0$.
\end{rem}

Remark that for any $t\in \R$ and $(\rx,\zeta)\in \R^{2n}$, if $P^{z,\bfr}\in C^\infty(\R^{2n}, GL_n(\R))$ satisfies $(ii)$, then $P^{z,\bfr}_{\rx,t\zeta}(\zeta)= \Ups_{t,T}^{z,\bfr}(\rx,\zeta)$. We shall note $P^{z,\bfr}_t:=(\rx,\zeta)\mapsto P^{z,\bfr}_{\rx,t\zeta}$, so that $P_1^{z,\bfr}=P^{z,\bfr}$ and $P_0^{z,\bfr}=\Id_{\R^n}$. Thus, $\Ups_{t,z,\bfr}(\rx,\zeta)=(\psi_{z}^\bfr(\rx,t\zeta),P^{z,\bfr}_{t,\rx,\zeta}(\zeta))$ and we define the following diffeomorphism on $\R^{3n}$, 
\begin{equation}
\label{xidef} \Xi_{t,z,\bfr}:=(\rx,\zeta,\vth)\mapsto (\Ups_{t,z,\bfr}(\rx,\zeta),\wt P_{t,\rx,\zeta}^{z,\bfr}(\vth))\, .
\end{equation}
We also define the $\R^{2n}$-valued function $\wh\Xi_{t,z,\bfr}:(\rx,\zeta,\vth)\mapsto (\psi_z^\bfr(\rx,t\zeta),\wt P_{t,\rx,\zeta}^{z,\bfr}(\vth))$.
We check that $J(\Xi_{t,z,\bfr}) =J(\Ups_{t,z,\bfr})\, (\det (P_{t}^{z,\bfr})^{-1})$ and $J(\Xi_{t,z,\bfr}^{-1})=J(\Ups_{-t,z,\bfr})\, (\det (P_{t}^{z,\bfr}\circ \Ups_{-t,z,\bfr}))$. Note also that for any $(\rx,\ry)\in \R^{2n}$, $\ol{\psi_{z}^\bfr}(\ry,\rx)= -P_{\rx,\ol{\psi_{z}^\bfr}(\rx,\ry)}^{z,\bfr}(\ol{\psi_{z}^\bfr}(\rx,\ry))$.
\begin{lem}
\label{lem-Phi-la}
Let $(z,\bfr)$ be a given frame, $\la, \la'\in[0,1]$ and $t\in [-1,1]$. Suppose also that $(M,\exp,E, d\mu ,\psi)$ has a $S_\sigma$-bounded geometry. Then

\noindent (i) $P_{t}^{z,\bfr}$, $(P_{t}^{z,\bfr})^{-1}$ are in $E_{\sigma,\kappa_{z,\bfr}}^0(\mathcal{M}_n(\R))$, and $\tau_{t}^{z,\bfr}$, $(\tau_{t}^{z,\bfr})^{-1}$ are in $E_{\sigma,\kappa_{z,\bfr}}^0(L(E_z))$.

\noindent (ii) $m_{t}^{z,\bfr}:=\psi_z^\bfr\circ I_{1,t} \in H_{\sigma,\kappa_{z,\bfr}}(\R^n)$ and there is $c>0$, $r\geq 1$ such that for any $(\rx,\zeta)\in \R^{2n}$, $\langle m_{t}^{z,\bfr} (\rx,\zeta)\rangle \geq c \langle \rx \rangle \langle \zeta\rangle^{-r}$. 

\noindent (iii) There is $c,\eps>0$ such that for any $(\rx,\zeta)\in \R^{2n}$, $\langle \psi_z^\bfr(\rx,\zeta) \rangle \geq c \langle \zeta \rangle^{\eps} \langle \rx \rangle^{-1}$.  

\noindent (iv) $\Phi_{\la,z,\bfr}\in H_{\sigma,\kappa_{z,\bfr}}(\R^{2n})$. In particular $J_{\la,z,\bfr} \in E_{\sigma,\kappa_{z,\bfr}}(\R)$.

\noindent (v) $\Ups_{t,z,\bfr} \in H_{\sigma,\kappa_{z,\bfr}}(\R^{2n})$. In particular $J({\Ups_{t,z,\bfr}}) \in E_{\sigma,\kappa_{z,\bfr}}(\R)$. Moreover, there is $C>0$ such that $\langle (\Ups_{t,T}^{z,\bfr})(\rx,\zeta)\rangle \leq C\langle \zeta \rangle$ for any $(\rx,\zeta)\in \R^{2n}$.

\noindent (vi) $J(\Xi_{t,z,\bfr})$ and $J(\Xi_{t,z,\bfr}^{-1})$ are in $E_{\sigma,\kappa_{z,\bfr}}(\R)$.
\end{lem}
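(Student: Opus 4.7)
Each item follows from a combination of the defining properties of a $S_\sigma$-linearization (Definition \ref{sprime}), Lemma \ref{lemGsigma}, and the explicit expressions of $\Phi_{\lambda,z,\bfr}$ and $\Ups_{t,z,\bfr}$ given in (\ref{Phi_la}) and (\ref{Ups_t}). The strategy is to reduce every statement to compositions of functions in $H_{\sigma,\kappa}$ and $E_{\sigma,\kappa}$ and then invoke the stability results of Lemma \ref{lemGsigma}, plus the fact that these spaces are algebras under pointwise product via Leibniz.

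For (i), observe that $P_t^{z,\bfr}=P^{z,\bfr}\circ(\Id_{\R^n},t\Id_{\R^n})$ and $\tau_t^{z,\bfr}=\tau_1^{z,\bfr}\circ I_{1,t}$. Since $I_{1,t}$ is linear, $(\rx,\zeta)\mapsto (\rx,t\zeta)\in H_{\sg,0}^{0}(\R^{2n})$ with the identity lower bound on its components, so Lemma \ref{lemGsigma}(i) applied coordinate-wise transfers the $E_{\sigma,\kappa_{z,\bfr}}^{0}$-membership from $P^{z,\bfr},(P^{z,\bfr})^{-1},\tau_1^{z,\bfr},(\tau_1^{z,\bfr})^{-1}$ (guaranteed by Definition \ref{sprime}) to $P_t^{z,\bfr},(P_t^{z,\bfr})^{-1},\tau_t^{z,\bfr},(\tau_t^{z,\bfr})^{-1}$. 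For (ii), the $H_{\sigma,\kappa_{z,\bfr}}$ part is again Lemma \ref{lemGsigma}(i) applied to $m_t^{z,\bfr}=\psi_z^\bfr\circ I_{1,t}$, using $\psi_z^\bfr\in H_{\sg,\kappa_{z,\bfr}}(\R^n)$.

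The lower bound in (ii) and the full statement (iii) are the main obstacle. For (iii), the key trick is to invert $\psi_z^\bfr$ fiberwise: $\zeta=\ol{\psi_{z,\rx}^\bfr}(\psi_z^\bfr(\rx,\zeta))$, so the $\O_M$-estimate $\langle\ol{\psi_z^\bfr}(\rx,\ry)\rangle\leq C\langle\rx,\ry\rangle^{q}$ coming from $\ol{\psi_z^\bfr}\in\O_M(\R^{2n},\R^n)$ gives
\[
\langle\zeta\rangle\leq C\langle\rx\rangle^{q}\langle\psi_z^\bfr(\rx,\zeta)\rangle^{q},
\]
which rearranges to the required $\langle\psi_z^\bfr(\rx,\zeta)\rangle\geq c\langle\rx\rangle^{-1}\langle\zeta\rangle^{1/q}$ with $\eps=1/q$. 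For the lower bound in (ii), one splits the cases $t=0$ (where $m_0^{z,\bfr}(\rx,\zeta)=\rx$, trivial) and $t\neq 0$; in the latter case one uses hypothesis $(H_\psi)$ to write $\rx=\psi_{m_t(\rx,\zeta)}(\eta)$ for $\eta$ polynomially controlled in $\zeta$, and then applies the $\O_M$-bound on $\ol{\psi_z^\bfr}$ once more, yielding $\langle\rx\rangle\leq C\langle m_t^{z,\bfr}(\rx,\zeta)\rangle^{q}\langle\zeta\rangle^{r'}$, which rearranges (with an adjusted $r$) to the desired estimate. The upper bound $\langle\psi_z^\bfr(\rx,\zeta)\rangle=\O(\langle\rx\rangle\langle\zeta\rangle^{r})$ in the hypothesis of Definition \ref{sprime}(i) then also controls $\langle m_t^{z,\bfr}(\rx,\zeta)\rangle$ from above.

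For (iv), writing $\Phi_{\lambda,z,\bfr}=(\psi_z^\bfr\circ I_{1,\lambda},\psi_z^\bfr\circ I_{1,\lambda-1})$, each component lies in $H_{\sg,\kappa_{z,\bfr}}(\R^n)$ by (ii), so $\Phi_{\lambda,z,\bfr}\in H_{\sg,\kappa_{z,\bfr}}(\R^{2n})$; then $J_{\lambda,z,\bfr}\in E_{\sg,\kappa_{z,\bfr}}(\R)$ follows from the remark after Lemma \ref{lemGsigma} that the Jacobian of an $H_{\sg,\kappa}$-map is in $E_{\sg,\kappa}$. For (v), the first component of $\Ups_{t,z,\bfr}$ is handled by (ii), and the second component $\Ups_{t,T}^{z,\bfr}(\rx,\zeta)=P^{z,\bfr}_{t,\rx,\zeta}(\zeta)$ is in $H_{\sg,\kappa_{z,\bfr}}(\R^n)$ by Lemma \ref{lemGsigma}(ii) applied to $P_t^{z,\bfr}\in E_{\sg,\kappa_{z,\bfr}}^{0}$ from (i); the upper bound $\langle\Ups_{t,T}^{z,\bfr}(\rx,\zeta)\rangle\leq C\langle\zeta\rangle$ comes from $P_t^{z,\bfr}\in E^{0}_{\sg,\kappa_{z,\bfr}}$ being uniformly bounded. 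Finally for (vi), the identities $J(\Xi_{t,z,\bfr})=J(\Ups_{t,z,\bfr})\det(P_t^{z,\bfr})^{-1}$ and $J(\Xi_{t,z,\bfr}^{-1})=J(\Ups_{-t,z,\bfr})\det(P_t^{z,\bfr}\circ\Ups_{-t,z,\bfr})$ combined with (i), (v), Lemma \ref{lemGsigma}(i), and the algebra property of $E_{\sigma,\kappa_{z,\bfr}}$ yield the result.
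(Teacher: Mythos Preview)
Your argument is correct and follows essentially the same route as the paper's proof: reduce everything to compositions handled by Lemma~\ref{lemGsigma}, use the algebra structure of $E_{\sigma,\kappa}$, and extract the lower bounds in (ii) and (iii) by inverting $\psi_z^\bfr$ via the $\O_M$-hypothesis on $\ol{\psi_z^\bfr}$.

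One minor point: in your treatment of the lower bound in (ii) you invoke $(H_\psi)$, but this is not needed. Writing $\rx=\psi_z^\bfr\bigl(m_t^{z,\bfr}(\rx,\zeta),\eta\bigr)$ with $\eta=\ol{\psi_z^\bfr}\bigl(m_t^{z,\bfr}(\rx,\zeta),\rx\bigr)$ is tautological, and the control $\langle\eta\rangle\leq C\langle\zeta\rangle$ comes directly from the identity $\eta=-t\,\Ups_{t,T}^{z,\bfr}(\rx,\zeta)=-t\,P_{t,\rx,\zeta}^{z,\bfr}(\zeta)$ (definition of $\Ups_t$) together with $P_t^{z,\bfr}\in E^0_{\sigma,\kappa_{z,\bfr}}$ from (i). The paper packages exactly this step as the inversion $\Ups_{-t}\circ\Ups_t=\Id$ combined with the uniform upper bound $\langle\Ups_{-t,z,\bfr}(\rx',\zeta')\rangle=\O(\langle\rx'\rangle\langle\zeta'\rangle^{r})$, which yields $\langle\rx,\zeta\rangle\leq C\langle m_t(\rx,\zeta)\rangle\langle P_{t,\rx,\zeta}(\zeta)\rangle^{r}\leq C'\langle m_t(\rx,\zeta)\rangle\langle\zeta\rangle^{r}$. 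This is the same mechanism as yours, just without the detour through $(H_\psi)$, which matters because the lemma must also cover the endpoints $t=\pm1$ where $(H_\psi)$ is not assumed.
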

\begin{proof}
$(i)$ The case $t=0$ is obvious. Suppose $t\neq 0$. Since $P_{t}^{z,\bfr}=P^{z,\bfr}\circ I_{1,t}$ and $I_{1,t}\in H_{\sg,\ka_{z,\bfr}}^0$ the result follows from Lemma \ref{lemGsigma} $(i)$. The same argument is applied to $(P_{t}^{z,\bfr})^{-1}$, $\tau_{t}^{z,\bfr}$ and $(\tau_{t}^{z,\bfr})^{-1}$.

\noindent $(ii)$ We shall use the shorthand $m_t:=m_t^{z,\bfr}$. In the case $t=0$, $m_0= \pi_1$, so we obtain the result. Suppose $t\neq 0$. In that case Lemma \ref{lemGsigma} $(i)$ entails that $m_t\in H_{\sg,\ka_{z,\bfr}}(\R^n)$. Since $\Ups_{t,z,\bfr}=(m_{t},\Ups_{t,T}^{z,\bfr})$, we see that $\langle \Ups_{t,z,\bfr}(\rx,\zeta)\rangle=\O(\langle \rx\rangle \langle \zeta\rangle^r)$ for a $r\geq 1$. Thus, there is $C>0$ such that for any $(\rx,\zeta)\in \R^{2n}$, we have $\langle m_t(\rx,\zeta)\rangle \langle P_{t,\rx,\zeta}(\zeta)\rangle^r \geq C \langle \rx,\zeta\rangle$. Since there is $K>0$ such that for any $(\rx,\zeta)\in \R^{2n}$, $\langle P_{t,\rx,\zeta}^{z,\bfr}(\zeta)\rangle \leq K \langle \zeta\rangle$, we obtain the desired estimate.

\noindent $(iii)$ $V:=(\pi_1,\psi_{z}^{\bfr})$ is a diffeomorphism on $\R^{2n}$ with inverse $V^{-1}=(\pi_1,\ol{\psi_z^\bfr})$. Since $\ol{\psi_{z}^\bfr}=\O(\langle \rx,\ry\rangle^{r})$ for a $r\geq 1$ by hypothesis, we see that there is $c>0$ such that 
$\langle \rx,\psi_z^\bfr(\rx,\zeta)\rangle\geq c \langle \rx,\zeta\rangle$ for any $(\rx,\zeta)\in \R^{2n}$. This yields the result.

\noindent $(iv)$ Direct consequence of $(ii)$ and the fact that $\Phi_{\la,z,\bfr}=(m_{\la},m_{\la-1})$.

\noindent $(v)$ follows from a straithforward application of $(ii)$, Lemma \ref{lemGsigma} $(ii)$ and the fact that for any $(\rx,\zeta)\in \R^{2n}$, $\Ups_{t,z,\bfr}(\rx,\zeta)=(m_{t}(\rx,\zeta),P_{t,\rx,\zeta}^{z,\bfr}(\zeta))$.

\noindent $(vi)$ By $(i)$, $(v)$ and Lemma \ref{lemGsigma} $(i)$, $P_{t}^{z,\bfr}\circ \Ups_{-t,z,\bfr} \in E^0_{\sg,\ka}(\M_n(\R))$. Thus the result follows from $(i)$, $(v)$, and the formulas $J(\Xi_{t,z,\bfr}) =J(\Ups_{t,z,\bfr})\, (\det (P_{t}^{z,\bfr})^{-1})$ and $J(\Xi_{t,z,\bfr}^{-1})=J(\Ups_{-t,z,\bfr})\, (\det (P_{t}^{z,\bfr}\circ \Ups_{-t,z,\bfr}))$.
\end{proof}

\subsection{Pseudodifferential operators}\label{pdosection}

\begin{assum} We suppose in this section and until section \ref{exsec} that $(M,\exp,E,d\mu,\psi)$ has a $S_{\sigma}$-bounded geometry.
\end{assum}

\begin{defn} A pseudodifferential operator of order $l,m$ and type $\sigma$ is an element of $\Psi_\sigma^{l,m}:=\mathfrak{Op}_\la(S^{l,m}_\sigma)$, where $\la\in [0,1]$.
\end{defn}
By Lemma \ref{slmdistr}, $S^{l,m}_{\sigma}$ can be seen as included in $\S'(T^*M, L(E))$, so $\mathfrak{Op}_\la(S^{l,m}_\sigma)$ is well defined. 
The following theorem shows that it does not depend on $\la$, and thus justify the notation $\Psi_\sigma^{l,m}$.  We note  $\tau_{R}^{\la,\la'}:=(\tau_{\la}^{z,\bfr})^{-1}\circ \Ups_{\la'-\la,z,\bfr}\,\tau_{\la'}^{z,\bfr}$ and $\tau_{L}^{\la,\la'}:=(\tau_{\la'-1}^{z,\bfr})^{-1}\tau_{\la-1}^{z,\bfr}\circ \Ups_{\la'-\la}^{z,\bfr}$. If $\psi=\exp$, we have  $\tau_{R}^{\la,\la'}=\tau_{R,\la'-\la}$ and $\tau_{L}^{\la,\la'}=(\tau_{L,\la'-\la})^{-1}$ where $\tau_{L,t}:= \tau_{t}^{z,\bfr}$ if $t\neq 1$ and $\tau_{L,t}:=(\tau_{-1}^{z,\bfr})^{-1}\circ \Ups_{1,z,\bfr}$ if $t=1$, and $\tau_{R,t}:=\tau_{t}^{z,\bfr}$ if $t\neq -1$ and $\tau_{R,t}:=(\tau_{1}^{z,\bfr})^{-1}\circ \Ups_{-1,z,\bfr}$ if $t=-1$.

\begin{thm} 
\label{lambdainv}
Let $\la,\la'\in [0,1]$ and $K= \mathfrak{Op}_\la(a)$, with $a\in S^{l,m}_\sigma$.  Then there exists (an unique) $a'\in S^{l,m}_\sigma$ such that  $K=\mathfrak{Op}_{\la'}(a')$. Moreover, for any frame $(z,\bfr)$, 
$$
a'_{z,\bfr}\sim \sum_{\b} \tfrac{(i/2\pi)^{|\b|}}{\b!} \big(\del^{(0,\b,\b)} \tau_{L}^{\la,\la'} a_{\la'-\la}^{z,\bfr} \tau_{R}^{\la,\la'}\big)_{\zeta=0} 
$$
where $a_{z,\bfr}:=T_{z,\bfr,*}(a)$, $a'_{z,\bfr}:=T_{z,\bfr,*}(a')$, and $a_{t}^{z,\bfr}$ is the amplitude defined for any $t\in [-1,1]$ as
\begin{align*}
a_{t}^{z,\bfr}(\rx,\zeta,\vth):= \tfrac{\mu_{z,\bfr}(m_{t}^{z,\bfr}(\rx,\zeta))}{\mu_{z,\bfr}(\rx)}|J\Xi_{t,z,\bfr}(\rx,\zeta)|\,(a_{z,\bfr}\circ \wh\Xi_{t,z,\bfr}(\rx,\zeta,\vth))\, .
\end{align*}
\end{thm}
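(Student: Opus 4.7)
The strategy is to reduce the identity to a frame and perform an explicit change of variables using the one-parameter family $\Ups_{t,z,\bfr}$. Set $t := \la'-\la$ and fix a frame $(z,\bfr)$. Since each $T_{z,\bfr,*}$ is a topological isomorphism $S^{l,m}_\sigma \to S^{l,m}_{\sigma,z}$, and since by Remark \ref{Oplien} the identity $\Op_\la(s)_{z,\bfr} = \Op_{\Ga_{\la,z,\bfr}}(\mu s_{z,\bfr})$ holds for any symbol $s$, it suffices to produce $a'_{z,\bfr}\in S^{l,m}_{\sigma,z}$ with
\[
\Op_{\Ga_{\la,z,\bfr}}(\mu a_{z,\bfr}) = \Op_{\Ga_{\la',z,\bfr}}(\mu a'_{z,\bfr}).
\]
In the defining integral of the left-hand side, I substitute the identity $\Phi_{\la,z,\bfr} = \Phi_{\la',z,\bfr}\circ\Ups_{-t,z,\bfr}$, which is available by Assumption \ref{assumH}, then change variables $(\rx,\zeta) = \Ups_{t,z,\bfr}(\rx_1,\zeta_1)$ in the $(\rx,\zeta)$-integration and $\vth = \wt P^{z,\bfr}_{t,\rx_1,\zeta_1}(\vth')$ in the covariable. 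The second substitution is the linear automorphism of $\R^n$ provided by Definition \ref{sprime}\,(ii); its Jacobian is $|\det P^{z,\bfr}_{t,\rx_1,\zeta_1}|^{-1}$, and the identity $P^{z,\bfr}_{t,\rx_1,\zeta_1}(\zeta_1) = \Ups^{z,\bfr}_{t,T}(\rx_1,\zeta_1)$ is exactly what is needed to transform $\langle\vth,\zeta\rangle$ into $\langle\vth',\zeta_1\rangle$. By Lemma \ref{lem-Phi-la}\,(vi) the two Jacobians combine into $|J(\Xi_{t,z,\bfr})|$, while the algebraic identity $\Ga_{\la,z,\bfr}(u)\circ\Ups_{t,z,\bfr} = \tau_R^{\la,\la'}\,\Ga_{\la',z,\bfr}(u)\,\tau_L^{\la,\la'}$ (obtained by inserting $\tau_{\la'}^{z,\bfr}(\tau_{\la'}^{z,\bfr})^{-1}$ and $\tau_{\la'-1}^{z,\bfr}(\tau_{\la'-1}^{z,\bfr})^{-1}$ in the definition of $\Ga_{\la,z,\bfr}$) extracts exactly the factors $\tau_L^{\la,\la'}$ and $\tau_R^{\la,\la'}$ from the parallel transport prefactors. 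Cyclically permuting under the trace produces an amplitude
\[
A(\rx,\zeta,\vth) \;=\; \mu_{z,\bfr}(\rx)\,\tau_L^{\la,\la'}(\rx,\zeta)\,a_t^{z,\bfr}(\rx,\zeta,\vth)\,\tau_R^{\la,\la'}(\rx,\zeta)
\]
such that $\Op_{\Ga_{\la,z,\bfr}}(\mu a_{z,\bfr}) = \Op_{\Ga_{\la',z,\bfr}}(A)$.

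Once $A$ is shown to lie in some amplitude class $\Pi^{l,w,m}_{\sigma,\kappa,z}$, Lemma \ref{reduction}\,(iii) applied with the $\la'$-quantization delivers a unique symbol $s_{\la'}(A)\in S^{l,m}_\sigma$ with $\Op_{\la'}(s_{\la'}(A))_{z,\bfr} = \Op_{\Ga_{\la',z,\bfr}}(A)$ and the asymptotic expansion
\[
T_{z,\bfr,*}(s_{\la'}(A)) \;\sim\; \sum_{\b} \tfrac{(i/2\pi)^{|\b|}}{\b!}\,\mu_{z,\bfr}^{-1}\bigl(\del^{(0,\b,\b)}A\bigr)_{\zeta=0}.
\]
Because $\del^{(0,\b,\b)}$ differentiates only in $(\zeta,\vth)$, the outer factor $\mu_{z,\bfr}(\rx)$ inside $A$ commutes through the derivative and is exactly cancelled by the $\mu_{z,\bfr}^{-1}$ prefactor, leaving the expansion claimed in the theorem. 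Setting $a' := s_{\la'}(A)$ solves the problem; uniqueness is immediate since $\Op_{\la'}$, being the inverse of the symbol map $\sigma_{\la'} = \F\circ\wt\Ga_{\la'}$, is a topological isomorphism between $\S'(T^*M,L(E))$ and $\S'(M\times M,L(E))$.

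The core technical obstacle is the verification that $A \in \Pi^{l,w,m}_{\sigma,\kappa,z}$; it is precisely here that a plain $\O_M$-linearization would be insufficient and the $S_\sigma$-linearization hypothesis is indispensable. Lemma \ref{lem-Phi-la} provides all the building blocks: $\Ups_{t,z,\bfr}, \Phi_{\la,z,\bfr} \in H_{\sigma,\kappa_{z,\bfr}}(\R^{2n})$, $|J(\Xi_{t,z,\bfr})| \in E_{\sigma,\kappa_{z,\bfr}}(\R)$, the two-sided bound $c\langle\rx\rangle\langle\zeta\rangle^{-r}\leq\langle m_t^{z,\bfr}(\rx,\zeta)\rangle\leq C\langle\rx\rangle\langle\zeta\rangle^{r}$, and the $E^0_{\sigma,\kappa_{z,\bfr}}$-regularity of $\tau_L^{\la,\la'}$, $\tau_R^{\la,\la'}$ together with $P^{z,\bfr}_t$ and $(P^{z,\bfr}_t)^{-1}$. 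Composing $a_{z,\bfr}\in S^{l,m}_{\sigma,z}$ with the first slot of $\wh\Xi_{t,z,\bfr}$ is then handled by Lemma \ref{HEamp}\,(ii) (in its version involving both $m=m_t^{z,\bfr}$ and the linear covariable map $P=P^{z,\bfr}_t$), while the Leibniz rule together with the algebra stability of $E_{\sigma,\kappa}$ shows that the remaining multiplications by $\mu_{z,\bfr}$, $\tau_L^{\la,\la'}$ and $\tau_R^{\la,\la'}$ preserve the amplitude class. This places $A$ in a single $\Pi^{l,w,m}_{\sigma,\kappa,z}$ with $\kappa$ controlled by the sum of the $\kappa_{z,\bfr}$ and the parameters produced by Lemma \ref{HEamp}, which is exactly the hypothesis needed to invoke Lemma \ref{reduction}\,(iii) and conclude.
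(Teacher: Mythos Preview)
Your proof is correct and follows essentially the same approach as the paper: reduce to a fixed frame via Remark~\ref{Oplien}, perform the change of variables given by $\Xi_{t,z,\bfr}$ (which you decompose as $\Ups_{t,z,\bfr}$ followed by the linear covariable map $\wt P^{z,\bfr}_{t}$), obtain the amplitude $\mu\,\tau_L^{\la,\la'}a_t^{z,\bfr}\tau_R^{\la,\la'}$, verify it lies in $\Pi^{l,w,m}_{\sigma,\kappa,z}$ via Lemmas~\ref{lem-Phi-la} and~\ref{HEamp}, and conclude with Lemma~\ref{reduction}\,(iii). Your algebraic identity $\Ga_{\la,z,\bfr}(u)\circ\Ups_{t,z,\bfr} = \tau_R^{\la,\la'}\,\Ga_{\la',z,\bfr}(u)\,\tau_L^{\la,\la'}$ is correct and makes the bookkeeping explicit.

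The one technical point you glide over is the justification of the change of variables itself: the defining integral for $\Op_{\Ga_{\la,z,\bfr}}(\mu a_{z,\bfr})$ is an oscillatory integral, not an absolutely convergent one, so a global change of variables is not automatic. The paper handles this by first assuming $m\leq -2n$ (so absolute convergence holds and the substitution is legitimate), then observing that the map $a_{z,\bfr}\mapsto \mu\,\tau_L^{\la,\la'}a_t^{z,\bfr}\tau_R^{\la,\la'}$ is continuous on $S^{l,m}_{\sigma,z}$, and finally invoking the density of $S^{-\infty}_{\sigma,z}$ in $S^{l,m}_{\sigma,z}$ (Lemma~\ref{toposymbol}) together with the continuity of $L_{u,\Ga}$ (Proposition~\ref{ampliOP}\,(ii)) to extend the identity to all orders $m$. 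You should add this step explicitly.
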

\begin{proof} 
Let us fix a frame $(z,\bfr)$ and note $a_{z,\bfr}:=T_{z,\bfr,*}(a)$. We saw in Remark \ref{Oplien} that $\Op_{\la}(a)_{z,\bfr}=\Op_{\Ga_{\la,z,\bfr}}(\mu a_{z,\bfr}))$. Thus, for any $u\in \S(M\times M, L(E))$, we have with $u_{z,\bfr}:=T_{z,\bfr,M^2}(u)\in \S(\R^{2n},L(E_z))$, 
\begin{align*}
\langle K,u\rangle =\int_{\R^{3n}}e^{2\pi i\langle \vth,\zeta\rangle}\Tr\big(\mu a_{z,\bfr}(\rx,\vth)\, (\Ga_{\la,z,\bfr}(u_{z,\bfr})(\rx,\zeta))^*\big)\, \,d\zeta\,d\vth\,d\rx\, .
\end{align*}
Suppose that $m\leq -2n$ so that the integral is absolutely convergent. We now proceed to the global change of variables provided by the diffeomorphism $\Xi_{\la'-\la}^{z,\bfr}$ of $\R^{3n}$ ($\Xi_{t,z,\bfr}$ is defined at (\ref{xidef})). We get $\langle K,u \rangle = \langle \Op_{\la',z,\bfr}(\mu  \tau_{L}^{\la,\la'} a_{\la'-\la}^{z,\bfr} \tau_{R}^{\la,\la'}),u_{z,\bfr}\rangle$. We check with Lemmas \ref{lem-Phi-la} and \ref{HEamp} that $ \tau_{L}^{\la,\la'} a_{\la'-\la}^{z,\bfr} \tau_{R}^{\la,\la'}$ is an amplitude in $\Pi_{\sigma,\ka,z}^{l,w,m}$ for a $\ka \geq 0$ and a $w\in \R$. We also see that the linear map $a_{z,\bfr}\mapsto \mu  \tau_{L}^{\la,\la'} a_{\la'-\la}^{z,\bfr} \tau_{R}^{\la,\la'}$ is continuous on $S_{\sg,z}^{l,m}$, which yields, using Proposition \ref{ampliOP} $(ii)$ and the density result of Lemma \ref{toposymbol}, the equality $\langle K,u \rangle = \langle \Op_{\la',z,\bfr}(\mu  \tau_{L}^{\la,\la'} a_{\la'-\la}^{z,\bfr} \tau_{R}^{\la,\la'}),u_{z,\bfr}\rangle$, for any order $m$ of the symbol $a$. 
The result now follows from Lemma \ref{reduction} $(iii)$.
\end{proof}

\begin{prop}
\label{pdoadjoint}
For each $\la\in [0,1]$ and $l,m\in \R$, $\sigma_{\la}$ is a linear isomophism from $\Psi_\sigma^{l,m}$ onto $S^{l,m}_\sigma$ and $\sigma_\la(A^\dag)= (\sigma_{1-\la}(A))^*$ for any $A\in \Psi^{l,m}_\sigma$. In particular a pseudodifferential operator $A$ is formally selfadjoint (i.e $A=A^\dag$ as operators on $\S$) if and only if its Weyl symbol $\sigma_W(A)$ is selfadjoint (as a $L(E)\to T^*M$ section).
\end{prop}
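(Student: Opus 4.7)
The proof breaks into three short pieces, all of which essentially reduce to Theorem \ref{lambdainv} together with the identity $\Op_\la(T^*)=(\Op_{1-\la}(T))^\dag$ recorded at the beginning of Section \ref{moyalsection}.

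For the first claim, my plan is to read $\Psi^{l,m}_\sigma$ through Theorem \ref{lambdainv}. By definition $\Psi^{l,m}_\sigma=\Op_{\la_0}(S^{l,m}_\sigma)$ for some $\la_0\in[0,1]$, and Theorem \ref{lambdainv} asserts that for any other $\la\in[0,1]$, every $A\in\Psi^{l,m}_\sigma$ can be written (uniquely) as $\Op_\la(a')$ with $a'\in S^{l,m}_\sigma$. Hence $\Psi^{l,m}_\sigma=\Op_\la(S^{l,m}_\sigma)$ for every $\la$. Since $\sigma_\la$ is already a linear bijection $\S'(M\times M,L(E))\to\S'(T^*M,L(E))$ with inverse $\Op_\la$, restriction yields the desired linear isomorphism $\Psi^{l,m}_\sigma\to S^{l,m}_\sigma$.

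For the second claim, I will begin by verifying that $A\in\Psi^{l,m}_\sigma$ implies $A^\dag\in\Psi^{l,m}_\sigma$. Fiberwise adjunction $b\mapsto b^*$ preserves $S^{l,m}_\sigma$, since $\|\del^{(\a,\b)}_{z,\bfr}(b^z)^*(x,\th)\|_{L(E_z)}=\|\del^{(\a,\b)}_{z,\bfr}b^z(x,\th)\|_{L(E_z)}$ gives the same symbol estimates \eqref{symbolIneq}; so setting $T:=\sigma_{1-\la}(A)\in S^{l,m}_\sigma$ we have $T^*\in S^{l,m}_\sigma$. Now apply the general identity
\[
\Op_\la(T^*)=(\Op_{1-\la}(T))^\dag
\]
with this $T$: the right-hand side equals $A^\dag$, so $A^\dag=\Op_\la(T^*)\in\Psi^{l,m}_\sigma$, and applying $\sigma_\la$ gives $\sigma_\la(A^\dag)=T^*=(\sigma_{1-\la}(A))^*$, which is the desired formula.

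The third claim is then immediate by specializing the second to $\la=\tfrac12=W$: $\sigma_W(A^\dag)=(\sigma_W(A))^*$, so $A=A^\dag$ on $\S$ if and only if $\sigma_W(A)=(\sigma_W(A))^*$ in $S^{l,m}_\sigma$. There is no real obstacle here; the only point that needs explicit mention is the elementary observation that fiberwise adjunction preserves the symbol estimates, after which everything is bookkeeping on top of Theorem \ref{lambdainv} and the adjoint identity for $\Op_\la$.
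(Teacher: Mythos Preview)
Your proof is correct and follows essentially the same approach as the paper's. The only cosmetic difference is that you cite the identity $\Op_\la(T^*)=(\Op_{1-\la}(T))^\dag$ from Section~\ref{moyalsection} where it was already recorded, while the paper's proof re-derives it here from the geometric relation $\Phi_\la(x,-\xi)=j\circ\Phi_{1-\la}(x,\xi)$; your added verification that $b\mapsto b^*$ preserves $S^{l,m}_\sigma$ and that $A^\dag\in\Psi^{l,m}_\sigma$ is a helpful explicit step that the paper leaves implicit.
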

\begin{proof} The fact that $\sigma_{\la}$ is a linear isomophism from $\Psi_\sigma^{l,m}$ onto $S^{l,m}_\sigma$ is a consequence Theorem \ref{lambdainv} and the fact that $\sigma_{\la}$ is a topological isomorphism from $\S'(M\times M,L(E))$ onto $\S'(T^*M,L(E))$. We check that for any $T\in \S'(T^*M,L(E))$, $\Op_{\la}(T)^\dag= \Op_{1-\la}(T^*)$ which is a direct consequence of the fact that $\Phi_{\la}(x,-\xi)=j\circ\Phi_{1-\la}(x,\xi)$ where $j(x,y)=(y,x)$. 
\end{proof}

\begin{prop}
\label{regularity}
Any operator in $\Psi_\sigma^{l,m}$ is regular. Moreover, for any $A\in \Psi_\sigma^{l,m}$ and $v\in \S$, we have 
$$
A(v)\, :x\mapsto  \int_{T_x^*(M)}d\mu_x^*(\th)\int_{T_x(M)}d\mu_x(\xi)\  e^{2\pi i \langle \th,\xi\rangle}\, \sigma_0(A)(x,\th) \,\tau_{-1}^{-1}(x,\xi)\,v(\psi_x^{-\xi})\,.
$$ 
\end{prop}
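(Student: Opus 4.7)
The plan is to first establish continuity of $A\in\Psi^{l,m}_\sigma$ from $\S$ into $\S$ by working in a frame and reducing to Lemma~\ref{amplContinu}; regularity will then follow automatically because the class $\Psi^{l,m}_\sigma$ is stable under taking formal adjoints. For the explicit formula, I would compute directly from the kernel of $A$ in the normal ($\lambda=0$) quantization and perform a change of variable in the fiber.

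First, given $A\in\Psi^{l,m}_\sigma$, note that by Proposition~\ref{pdoadjoint} its adjoint $A^\dag$ is also in $\Psi^{l,m}_\sigma$ (with Weyl symbol replaced by its conjugate), so it is enough to show that every element of $\Psi^{l,m}_\sigma$ sends $\S(M;E)$ continuously into itself. Fix a frame $(z,\bfr)$. Using Theorem~\ref{lambdainv} I can write $A=\mathfrak{Op}_0(s)$ for a (unique) symbol $s\in S^{l,m}_\sigma$, and by Remark~\ref{Oplien} the transfer $A_{z,\bfr}$ equals $\mathfrak{Op}_{\Gamma_{0,z,\bfr}}(\mu s_{z,\bfr})$, where $\mu s_{z,\bfr}\in \Pi^{l,0,m}_{\sigma,0,z}$. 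Since $T_{z,\bfr}$ is a topological isomorphism $\S(M;E)\to\S(\R^n,E_z)$, it suffices to show that $A_{z,\bfr}$ is continuous on $\S(\R^n,E_z)$.

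For this I would apply Lemma~\ref{amplContinu} to the operator $\Gamma_{0,z,\bfr}$. Using (\ref{Phi_la}) and the fact that $\tau_0^{z,\bfr}=\Id$ (because $\psi_x(0)=x$), one has $\Gamma_{0,z,\bfr}=R_{\tau_{-1}^{z,\bfr}}\circ C_{\Phi_{0,z,\bfr}}$ with $\Phi_{0,z,\bfr}=(\pi_1,\psi)$ where $\psi(\rx,\zeta):=\psi_z^\bfr(\rx,-\zeta)=m_{-1}^{z,\bfr}(\rx,\zeta)$. The hypotheses of Lemma~\ref{amplContinu} are all verified by invoking the $S_\sigma$-linearization properties: $\psi\in\O_M(\R^{2n},\R^n)$ and $\tau_{-1}^{z,\bfr}\in\O_M(\R^{2n},L(E_z))$ (Lemma~\ref{OMlem}), the global lower bound $\langle\psi(\rx,\zeta)\rangle\geq c\langle\rx\rangle\langle\zeta\rangle^{-r}$ with $\eps=1$ comes directly from Lemma~\ref{lem-Phi-la}(ii) applied to $t=-1$, and the pointwise lower bound $\langle\psi(\rx,\zeta)\rangle\geq c_\rx\langle\zeta\rangle^{\eps}$ is Lemma~\ref{lem-Phi-la}(iii) with $c_\rx=c\langle\rx\rangle^{-1}$. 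Finally, for the amplitude $\mu s_{z,\bfr}\in\Pi^{l,0,m}_{\sigma,0,z}$ the associated triple $f=(f_1,f_2,f_3)$ has $f_2\equiv 0$ and $f_3(\nu)=m-|\gamma|$, so all required quantities $f_{i,\rho_i,\mu}$ are finite for any $(\rho_1,\rho_2,\rho_3)$ with $\rho_2+(r/\eps)\rho_1<1$ and $\rho_3<1$ (take, e.g., the $\rho_i$ small enough). Lemma~\ref{amplContinu} then gives the desired continuity.

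For the explicit formula, I would compute the kernel of $A$ corresponding to $\sigma_0(A)=s$. By the very definition of $\mathfrak{Op}_0$ we have $K_A=\tfrac{1}{\mu_0}\,\Gamma_0^{-1}\circ\overline{\F}(s)$ on $\S(T^*M,L(E))$, and since $\tau_0=\Id$ and $\Phi_0^{-1}(x,y)=(x,-\psi_x^{-1}(y))$, this yields
\[
K_A(x,y)=\tfrac{1}{\mu_0(x,y)}\,\overline{\F}(s)(x,-\psi_x^{-1}(y))\,\tau_{-1}^{-1}(x,-\psi_x^{-1}(y)).
\]
Substituting this into $A(v)(x)=\int_M K_A(x,y)v(y)\,d\mu(y)$, performing the change of variable $y=\psi_x(-\xi)=\psi_x^{-\xi}$ in the fiber (which transforms $d\mu(y)$ into $\mu_0(x,y)\,d\mu_x(\xi)$ by the very definition (\ref{mu_la_def}) of $\mu_0$ together with the identities $m_0(x,y)=x$ and $\Phi_{0,z,\bfr}^{-1}(\rx,\ry)=(\rx,-\psi_x^{-1}(y))$ in coordinates), and expanding $\overline{\F}(s)(x,\xi)=\int_{T_x^*M}e^{2\pi i\langle\th,\xi\rangle}s(x,\th)\,d\mu_x^*(\th)$ gives the announced formula (the oscillatory integral being well defined in the distributional sense when $m$ is large and absolutely convergent when $m<-n$). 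The main technical point to watch is the bookkeeping of the density factors in the change of variable, which is precisely arranged so that the factor $\mu_0(x,y)^{-1}$ from the kernel cancels with the Jacobian arising in passing from $d\mu(y)$ to $d\mu_x(\xi)$.
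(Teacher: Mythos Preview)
Your proof is correct and follows essentially the same approach as the paper's: pass to a frame, use the $\lambda=0$ quantization to write $A_{z,\bfr}=\Op_{\Gamma_{0,z,\bfr}}(\mu a_{z,\bfr})$, verify the hypotheses of Lemma~\ref{amplContinu} via Lemma~\ref{lem-Phi-la}~(ii),~(iii), and conclude regularity from Proposition~\ref{pdoadjoint}. You also supply the explicit kernel computation and change of variable for the formula, which the paper leaves implicit; your bookkeeping of the $\mu_0$ factor against the Jacobian is correct.
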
 
\begin{proof} Let $A\in \Psi_{\sigma}^{l,m}$ and $a:=\sigma_0(A)$. Thus, for any frame $(z,\bfr)$, $A_{z,\bfr}=\Op_{\Ga_{0,z,\bfr}}(\mu a_{z,\bfr})$ so by Lemmas \ref{amplContinu}, \ref{lem-Phi-la} $(ii)$ and $(iii)$, $A_{z,\bfr}$ is continuous from $\S(\R^n,E_z)$ into itself. By Proposition \ref{pdoadjoint}, $A^\dag$ is a pseudodifferential operator in $\Psi_{\sigma}^{l,m}$, so we also obtain $(A^\dag)_{z,\bfr}$ continuous from $\S(\R^n,E_z)$ into itself. The result follows.
\end{proof}

\subsection{Link with standard pseudodifferential calculus on $\R^n$ and $L^2$-continuity}
\label{linkstd}

We suppose in this section that $E$ is the scalar bundle. If $A\in \Psi_{\sg}$, then $A_{z,\bfr}$ belongs to the space, noted $\Psi_{\sg,\psi}$, of regular operators $B$ on $\S(\R^n)$, of the form 
$$
B(v) (\rx) = \int_{\R^{2n}} e^{2\pi i \langle \vth,\zeta\rangle} a(\rx,\vth) v(\psi_{z}^\bfr(\rx,-\zeta)) d\zeta d\vth 
$$
where $a\in S^{\infty}_{\sg}(\R^{2n})$. We study in this section a sufficient condition on $\psi$, such that this space $\Psi_{\sg,\psi}$ is in fact equal to the usual algebra $\Psi_{\sg,std}$ pseudodifferential operators on $\R^n$ with the standard linearization $\psi(x,\zeta)= x+\zeta$. Here $\Psi_{0,std}$ corresponds to the Hörmander calculus \cite{Hormander} on $\R^n$ and $\Psi_{1,std}$ is the $SG$-calculus on $\R^n$.

We will note $\psi:=\psi_z^\bfr$, $V_\rx(\zeta):=-\psi(\rx,-\zeta)+\rx$, $M_{\rx,\zeta}:= [\int_0^1 \del_{j}(V_x^{-1})^{i} (t\zeta) dt]_{i,j}$ and $N_{\rx,\zeta}:=[\int_0^1 \del_{j}V_x^{i} (t\zeta) dt]_{i,j}$. We consider the following hypothesis, noted $(H_V)$:

\noindent  $(i)$ there is $\eps,\delta,\eta>0$ such that for any $(\rx,\zeta)\in \R^{2n}$ with $\norm{\zeta}\leq \eps \langle \rx\rangle^{\sg \eta}$, we have $\det M_{\rx,\zeta} \geq \delta$ and $\det N_{\rx,\zeta}\geq \delta$,

\noindent $(ii)$ the functions $(dV_\rx)_{\rx,\zeta}$ and $(dV_\rx^{-1})_{\rx,\zeta}$ are in $E_\sg^0(\M_n(\R))$.

\begin{prop}
\label{propreducRn}
If the hypothesis $(H_V)$ holds, we have $\Psi_{\sg,\psi} = \Psi_{\sg,std}$.
\end{prop}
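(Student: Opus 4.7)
The plan is to prove the equality by double inclusion, using a Kuranishi-type change of variables. Write $\alpha$ for the positive constant called $\eta$ in $(H_V)(i)$ (to avoid collision with integration variables below), and fix a cutoff $\chi\in C^\infty_c(\R^n)$ equal to $1$ on a small ball about the origin and vanishing outside a slightly larger one. For the inclusion $\Psi_{\sigma,\psi}\subseteq\Psi_{\sigma,std}$, start with $B\in\Psi_{\sigma,\psi}$ presented via an amplitude $a\in S^\infty_\sigma(\R^{2n})$, and split $B = B_1 + B_2$ using $\chi(\zeta/\langle\rx\rangle^{\sigma\alpha})$; this factor localizes $B_1$ precisely to the region where $(H_V)(i)$ delivers a uniform determinantal lower bound on both $M_{\rx,\zeta}$ and $N_{\rx,\zeta}$.

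For $B_1$, I would first perform the fiberwise diffeomorphism $\zeta'=V_\rx(\zeta)$, so that $v(\psi(\rx,-\zeta))=v(\rx-\zeta')$ and the Jacobian $|JV_\rx^{-1}|$ is controlled by the $E_\sigma^0$ bounds of $(H_V)(ii)$. Since $V_\rx^{-1}(0)=0$, one has $V_\rx^{-1}(\zeta')=M_{\rx,\zeta'}\zeta'$, so the phase becomes $\langle\vth,M_{\rx,\zeta'}\zeta'\rangle$, and the Kuranishi substitution $\vth'={}^tM_{\rx,\zeta'}\vth$, legitimate on the support of the cutoff thanks to the determinantal bound, converts $B_1$ into the standard form $\int e^{2\pi i\langle\vth',\zeta'\rangle}\,b(\rx,\zeta',\vth')\,v(\rx-\zeta')\,d\zeta'\,d\vth'$. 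Repeated applications of Lemma \ref{lemGsigma} and Lemma \ref{HEamp}, fed by $(H_V)(ii)$, show that $b$ lies in an amplitude class $\Pi^{l,w,m}_{\sigma,\kappa,z}$ to which the reduction Lemma \ref{reduction} applies, producing a genuine symbol in $S^{l,m}_{\sigma,z}$; hence $B_1\in\Psi_{\sigma,std}$.

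For $B_2$, whose integrand is supported where $\norm{\zeta}\geq c\langle\rx\rangle^{\sigma\alpha}$, I would iterate the identity \eqref{Mformula} on the oscillatory integral to extract arbitrarily many factors $\zeta^\beta/\norm{\zeta}^{2p}$; each such integration by parts in $\vth$ produces a decay $\norm{\zeta}^{-2p}\leq c^{-2p}\langle\rx\rangle^{-2p\sigma\alpha}$ at the cost of $\vth$-derivatives of $a$ (which remain in $S^\infty_\sigma$). After sufficiently many iterations the integrand meets the hypotheses of the smoothing criterion Lemma \ref{noyauReste}, so $B_2$ is an isotropic smoothing operator, \emph{a fortiori} in $\Psi_{\sigma,std}$. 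The reverse inclusion $\Psi_{\sigma,std}\subseteq\Psi_{\sigma,\psi}$ proceeds symmetrically via the substitution $\zeta'=V_\rx^{-1}(\zeta)$, using $N_{\rx,\zeta}$ in place of $M_{\rx,\zeta}$; since $(H_V)$ treats $V_\rx$ and $V_\rx^{-1}$ symmetrically, the same machinery applies verbatim.

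The principal obstacle is verifying that the amplitude $b(\rx,\zeta',\vth')=a(\rx,({}^tM_{\rx,\zeta'})^{-1}\vth')\,|\det M_{\rx,\zeta'}|^{-1}\,|JV_\rx^{-1}(\zeta')|\,\chi(\zeta'/\langle\rx\rangle^{\sigma\alpha})$ produced by the Kuranishi substitution genuinely belongs to an amplitude class handled by Lemma \ref{reduction}: every derivative in $\zeta'$ strikes $M_{\rx,\zeta'}$ both inside the argument of $a$ and inside the Jacobian factors, and each such occurrence must retain a symbol-type estimate uniformly on the support of $\chi$, where $\norm{\zeta'}$ is allowed to grow like $\langle\rx\rangle^{\sigma\alpha}$. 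The systematic bookkeeping will be provided by Lemma \ref{lemGsigma}(iii) and Lemma \ref{HEamp}(iv) applied to the composition with $({}^tM_{\rx,\zeta'})^{-1}$, together with the fact that $(H_V)(ii)$ forces all relevant derivative matrices to have entries in $E_\sigma^0$, which is exactly the stability class under composition established in Section \ref{Ssigsec}.
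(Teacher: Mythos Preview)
Your proposal is correct and follows essentially the same strategy as the paper: a Kuranishi-type linearization on the near-diagonal piece (using $M_{\rx,\zeta}$) and repeated integration by parts via \eqref{Mformula} on the far-diagonal piece to force smoothing. The only cosmetic differences are that the paper performs the global change of variables $\zeta\mapsto V_\rx^{-1}(\zeta)$ \emph{before} inserting the cutoff and splitting into $A_1+A_2$, packages your amplitude check into the dedicated Lemma~\ref{H1H2cons} (itself proved via Lemma~\ref{HEamp} and Proposition~\ref{inverse}), and invokes the density of $S^{-\infty}_\sigma$ in $S^{l,m}_\sigma$ to justify the formal manipulations for general symbols --- a step you should also make explicit when filling in the details.
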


We set $\chi_{\eps,\eta}(\rx,\zeta):= b(\tfrac{\norm{\zeta}^2}{\eps^{2}\langle \rx\rangle^{2\sg \eta}})$ where $b\in C^{\infty}_c(\R,[0,1])$ is such that $b=0$ on $\R\backslash]-1,1[$ and $b=1$ on $[-1/4,1/4]$. 

\begin{lem}
\label{H1H2cons}
Suppose $(H_V)$.
If $a \in S^{l,m}_\sg(\R^{2n})$, then the application $$
a_{\chi,M}:(\rx,\zeta,\vth)\mapsto \chi_{\eps,\eta}(\rx,\zeta) a(\rx,\wt M_{\rx,\zeta} \vth) |J(V_\rx^{-1}|(\zeta)\,(\det M_{\rx,\zeta} )^{-1}$$
is an amplitude in $\cup_{k,w}$ $\Pi_{\sg,\ka,z}^{l,w,m}(\R^{3n})$. Similarly, 
$$a_{\chi,N}:(\rx,\zeta,\vth)\mapsto \chi_{\eps,\eta}(\rx,\zeta) a(\rx,\wt N_{\rx,\zeta} \vth) |J(V_\rx)|(\zeta)\,(\det N_{\rx,\zeta} )^{-1}$$ is in $\bigcup_{k,w}\Pi_{\sg,\ka,z}^{l,w,m}(\R^{3n})$.
\end{lem}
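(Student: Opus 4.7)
The plan is to verify the amplitude estimates defining $\Pi^{l,w,m}_{\sg,\ka,z}$ directly, by Leibniz expansion of the product
$$a_{\chi,M} = \chi_{\eps,\eta}(\rx,\zeta)\cdot a(\rx,\wt M_{\rx,\zeta}\vth)\cdot|J(V_\rx^{-1})|(\zeta)\cdot(\det M_{\rx,\zeta})^{-1}.$$
The key observation is that on $\mathrm{supp}(\chi_{\eps,\eta})$ one has $\norm{\zeta}\leq \eps\langle\rx\rangle^{\sg\eta}$, hence $\langle\zeta\rangle\leq K\langle\rx\rangle^{\sg\eta}$, and by $(H_V)(i)$ $\det M_{\rx,\zeta}\geq\delta>0$; so the whole expression is smooth on a neighborhood of the support and extends by zero outside, and polynomial growth in $\langle\zeta\rangle$ on the support translates into polynomial growth in $\langle\rx\rangle$.

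First I would analyze each factor. A direct Faa di Bruno computation on $b(\norm{\zeta}^2/(\eps^2\langle\rx\rangle^{2\sg\eta}))$, using the support bound, gives $|\del^{(\a,\b)}\chi_{\eps,\eta}|\leq C_{\a,\b}\langle\rx\rangle^{-\sg\eta|\a+\b|}$. Differentiating under the integral, $\del_\rx^\a\del_\zeta^\b M_{\rx,\zeta}=\int_0^1 t^{|\b|}\,\del_\rx^\a(\del^{\b+e_j}V_\rx^{-1})(t\zeta)\,dt$, so hypothesis $(H_V)(ii)$ places $M_{\rx,\zeta}$ and $N_{\rx,\zeta}$ in $E^0_{\sg,\ka}(\M_n(\R))$ for some $\ka\geq 0$. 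The uniform lower bound $\det M_{\rx,\zeta}\geq\delta$ on a neighborhood of $\mathrm{supp}(\chi_{\eps,\eta})$, combined with Faa di Bruno applied to $x\mapsto 1/x$, places $(\det M_{\rx,\zeta})^{-1}$ and similarly $|J(V_\rx^{-1})|(\zeta)$ in $E^0_{\sg,\ka}$ on this neighborhood; the matrix $\wt M={}^t\!M^{-1}$ then inherits the same control by Cramer's rule.

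For the composition $(\rx,\zeta,\vth)\mapsto a(\rx,\wt M_{\rx,\zeta}\vth)$ I would invoke Lemma \ref{HEamp}(ii) with $m(\rx,\zeta)=\rx\in H^0_{\sg,0}(\R^n)$ (the trivial case $w=r=0$ makes the sandwich bounds on $\langle m\rangle$ tautological) and $P=\wt M$, after smoothly extending $\wt M$ outside a slightly larger neighborhood of $\mathrm{supp}(\chi_{\eps,\eta})$ by $\Id$ via a two-scale cutoff (this does not affect $a_{\chi,M}$ since the outer $\chi_{\eps,\eta}$ kills the extension region). The lower bound $\langle\wt M_{\rx,\zeta}\vth\rangle\geq c\langle\vth\rangle$ follows from the uniform invertibility of $M$ on the support, and we conclude that this composition lies in some $\Pi^{l,w',m}_{\sg,\ka',z}$ with the correct $\langle\vth\rangle^{m-|\ga|}$ behavior on $\vth$-derivatives.

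The main obstacle is the bookkeeping after the Leibniz expansion. Each derivative $\del^{(\a,\b,\ga)}a_{\chi,M}$ decomposes into a finite sum of products in which the $\rx,\zeta$ derivatives of the scalar factors and of $\wt M$ produce terms of the form $\langle\rx\rangle^{\sg(l-|\a+\b|)}\langle\zeta\rangle^{c_0+c_1|\a+\b|}$ for nonnegative constants $c_0,c_1$ determined by $(H_V)$, while the $\vth$-derivatives contribute $\langle\vth\rangle^{m-|\ga|}$. Choosing $w\geq c_0$ and $\ka\geq c_1$ large enough places $a_{\chi,M}$ in $\Pi^{l,w,m}_{\sg,\ka,z}$. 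Whenever a residual factor $\langle\zeta\rangle^p$ cannot be absorbed directly into the $\langle\zeta\rangle^{w+\ka|\a+\b|}$ slot, the support estimate $\langle\zeta\rangle\leq K\langle\rx\rangle^{\sg\eta}$ is invoked to convert it into a power of $\langle\rx\rangle$ compatible with the $\langle\rx\rangle^{\sg(l-|\a+\b|)}$ factor. The treatment of $a_{\chi,N}$ is entirely parallel, interchanging $M\leftrightarrow N$ and $V_\rx^{-1}\leftrightarrow V_\rx$.
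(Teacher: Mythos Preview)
Your approach is correct and is essentially the paper's own argument spelled out in detail: the paper's proof reads in its entirety ``The result follows from Lemma \ref{HEamp} $(ii)$ and applications of Proposition \ref{inverse}'', and you have identified precisely these two ingredients (Lemma \ref{HEamp}(ii) for the composition $(\rx,\zeta,\vth)\mapsto a(\rx,\wt M_{\rx,\zeta}\vth)$, and Faa di Bruno on $x\mapsto 1/x$ --- which is the content of Proposition \ref{inverse} --- for $(\det M_{\rx,\zeta})^{-1}$). Your explicit extension of $\wt M$ by $\Id$ off a neighborhood of $\mathrm{supp}(\chi_{\eps,\eta})$ is a point the paper glosses over but which is indeed needed to invoke Lemma \ref{HEamp}(ii) globally.

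One minor remark: your closing sentence about invoking the support bound $\langle\zeta\rangle\leq K\langle\rx\rangle^{\sg\eta}$ to convert residual $\langle\zeta\rangle^p$ factors into $\langle\rx\rangle$ powers is unnecessary and potentially misleading. Since the conclusion is membership in $\bigcup_{\ka,w}\Pi^{l,w,m}_{\sg,\ka,z}$ with $w$ and $\ka$ free, any polynomial growth in $\zeta$ is absorbed directly into the $\langle\zeta\rangle^{w+\ka|\a+\b|}$ slot; converting upward into $\langle\rx\rangle$ powers would instead threaten the fixed exponent $\sg(l-|\a+\b|)$. You can simply drop that sentence.
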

\begin{proof}
The result follows from Lemma \ref{HEamp} $(ii)$ and applications of Proposition \ref{inverse}.
\end{proof}

\begin{proof}[Proof of Proposition \ref{propreducRn}] Suppose that $a\in S^{l,m}_{\sg}(\R^{2n})$ and define $A$ as the operator in $\Psi_{\sg,\psi}$ with normal symbol $a$. We obtain for any $v\in \S(\R^{2n})$
$$
 A(v)(\rx):=\int_{\R^{2n}} e^{2\pi i \langle \vth,\zeta\rangle} a(\rx,\vth) v(\psi(\rx,-\zeta)) d\zeta d\vth \, .
$$
We suppose first that $a\in S^{-\infty}_{\sg}(\R^{2n})$. We have after a change of variable, and cutting the integral in two parts 
$A(v)(\rx) = A_1(v)(\rx) + A_2(v)(\rx)$ where 
\begin{align*}
&A_1(v)(\rx) = \int_{\R^{2n}}  e^{2\pi i \langle \vth,M_{\rx,\zeta}(\zeta)\rangle} \chi_{\eps,\eta}(\rx,\zeta) a(\rx,\vth) |J(V_\rx^{-1})|(\zeta) v(\rx-\zeta) d\zeta d\vth\, , \\
&A_2(v)(\rx) = \int_{\R^{2n}}  e^{2\pi i \langle \vth,V_\rx^{-1}(\zeta)\rangle} (1-\chi_{\eps,\eta})(\rx,\zeta) a(\rx,\vth) |J(V_\rx^{-1})|(\zeta) v(\rx-\zeta) d\zeta d\vth \, .
\end{align*}
In $A_1$, we permute the integrations $d\zeta$ and $d\vth$ and proceed to a change of the variable $\vth$, while in $A_2$ we integrate by parts in $\vth$ using formula (\ref{Mformula}) so that for any $p\in \N$,
\begin{align*}
&A_1(v)(\rx) = \int_{\R^{2n}}  e^{2\pi i \langle \vth,\zeta \rangle}  a_{\chi,M}(\rx,\zeta,\vth) v(\rx-\zeta) d\zeta d\vth\,  ,\\
&A_2(v)(\rx) = \int_{\R^{2n}}   e^{2\pi i \langle \vth,V_\rx^{-1}(\zeta)\rangle} (1-\chi_{\eps,\eta})(\rx,\zeta) \, ^t M_{\vth}^{p,V_\rx^{-1}(\zeta)}(a) |J(V_\rx^{-1})|(\zeta) v(\rx-\zeta) d\zeta d\vth \, .
\end{align*}
As a consequence with Lemma \ref{H1H2cons}, and with the density of $S^{-\infty}_{\sg}(\R^{2n})$ in $S^{l,m}_{\sg}(\R^{2n})$, we see that $A$ is the sum of two pseudodifferential operators in $\Psi_{\sg,std}$: $A= A_\chi + R$ where $R\in\Psi_{\sg,std}^{-\infty}$ and $A_\chi$ has $a_{\chi,M}$ as (standard) amplitude.
The implication in the other sense is similar.
\end{proof}

\begin{rem} 
In the case of pseudodifferential operator with local compact control over the $x$ variable and with $\psi$ coming from a connection, by cutting-off in the $\zeta$-variable or in other words taking $y:=\psi(x,-\zeta)$ and $x$ sufficiently close to each other, we have in fact $\Psi_{\sg,\psi}$ equal to $\Psi_{\sg,std}$ modulo smoothing elements (see \cite{Shara1}).
\end{rem}

As a consequence, we see that if the hypothesis $(H_V)$ is satisfied for a frame $(z,\bfr)$, then $\Psi_{\sg,\psi}(=\Psi_{\sg,std})$ is stable under composition of operators and the symbol composition formula is then given by a quadruple asympotic summation modulo smoothing symbols. 

We will show in the next section that we can also obtain stability under composition directly, without using a reduction to the standard calculus on $\R^n$. We shall obtain with this method a simpler symbol composition formula on $\Psi_{\sg,\psi}$, analog to the usual one on $\Psi_{\sg,std}$.

As a direct consequence of the previous proposition, we have the following $L^2$-continuity result for pseudodifferential operators on $M$.

\begin{prop} 
\label{L2cont}
If $(H_V)$ is satisfied for the function $V_{\rx}^{-1}$ in a frame $(z,\bfr)$, then any pseudodifferential operators on $M$ of order $(0,0)$ extends as a bounded operator on $L^2(M,d\mu)$.
\end{prop}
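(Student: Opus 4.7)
The plan is to transfer $A$ into the frame $(z,\bfr)$ and invoke the classical $L^2$-boundedness theorem for standard pseudodifferential operators of order $(0,0)$ on $\R^n$. Concretely, in this scalar setting the map $T_{z,\bfr}:u\mapsto u\circ (n_z^\bfr)^{-1}$ extends to a unitary $U:L^2(M,d\mu)\to L^2(\R^n,\mu_{z,\bfr}\,d\rx)$ under which $A$ is conjugate to its transfer $A_{z,\bfr}$; this is a direct consequence of the identity $d\mu=(\mu_{z,\bfr}\circ n_z^\bfr)|dx^{z,\bfr}|$. Since $\mu_{z,\bfr}\in S_\sg^\times(\R^n)$, both $\mu_{z,\bfr}$ and $1/\mu_{z,\bfr}$ are bounded on $\R^n$, so the identity map yields an equivalence of norms between $L^2(\R^n,\mu_{z,\bfr}\,d\rx)$ and $L^2(\R^n,d\rx)$. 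The problem therefore reduces to proving that $A_{z,\bfr}$ is bounded on $L^2(\R^n,d\rx)$.

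For the identification of $A_{z,\bfr}$, Remark \ref{Oplien} gives $A_{z,\bfr}=\Op_{\Ga_{0,z,\bfr}}((\mu s)_{z,\bfr})$, where $s=\sigma_0(A)\in S^{0,0}_\sg$ and $(\mu s)_{z,\bfr}(\rx,\zeta,\vth)=\mu_{z,\bfr}(\rx)\,s_{z,\bfr}(\rx,\vth)$ is independent of $\zeta$. Since $\mu_{z,\bfr}\in S_\sg^\times$, the product $\mu_{z,\bfr}\,s_{z,\bfr}$ still belongs to $S^{0,0}_\sg(\R^{2n})$, so $A_{z,\bfr}$ lies in the space $\Psi_{\sg,\psi}$ introduced at the top of Section \ref{linkstd}, with a normal symbol of order $(0,0)$. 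Proposition \ref{propreducRn}, valid under the hypothesis $(H_V)$, then gives $\Psi_{\sg,\psi}=\Psi_{\sg,\mathrm{std}}$, and an inspection of its proof via Lemma \ref{H1H2cons} shows that the reduction preserves the symbol order; hence $A_{z,\bfr}\in \Psi_{\sg,\mathrm{std}}^{0,0}$.

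The conclusion will then follow from the classical $L^2$-boundedness theorems for standard pseudodifferential operators of order $(0,0)$ on $\R^n$: the Calder\'on--Vaillancourt theorem for the H\"ormander class $S^0_{0,0}$ when $\sg=0$, the standard $L^2$-continuity theorem for the $SG$-calculus $SG^{0,0}$ when $\sg=1$, and an analogous Cotlar--Stein-type argument for intermediate $\sg\in(0,1)$. The most delicate point to verify is that the reduction in Proposition \ref{propreducRn} truly preserves the order $(0,0)$, since Lemma \ref{H1H2cons} produces an amplitude containing the change-of-variable factor $|J(V_\rx^{-1})|(\det M_{\rx,\zeta})^{-1}$; this factor lies in $E^0_{\sg,\ka}(\R)$ thanks to part $(ii)$ of $(H_V)$, so the resulting amplitude stays in $\Pi^{0,w,0}_{\sg,\ka}(\R^{3n})$ and, by Lemma \ref{reduction}, reduces back to an $S^{0,0}_\sg$-symbol modulo a smoothing remainder (itself $L^2$-bounded since its kernel is Schwartz). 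Once this verification is in place, the $L^2$-continuity of $A$ on $L^2(M,d\mu)$ follows immediately by unitary conjugation.
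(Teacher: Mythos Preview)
Your proof is correct and follows essentially the same route as the paper: transfer $A$ to the frame $(z,\bfr)$, identify $A_{z,\bfr}$ as a standard pseudodifferential operator of order $(0,0)$ on $\R^n$ via the reduction of Proposition~\ref{propreducRn}, and invoke classical $L^2$-boundedness. The paper compresses all of this into two lines, while you spell out the unitary transfer, the boundedness of $\mu_{z,\bfr}^{\pm1}$, and the order-preservation in the amplitude reduction.

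Two minor remarks. First, the hypothesis here is only ``$(H_V)$ for $V_\rx^{-1}$'', which yields only the inclusion $\Psi_{\sg,\psi}^{0,0}\subseteq\Psi_{\sg,\mathrm{std}}^{0,0}$ (one direction of the proof of Proposition~\ref{propreducRn}, the one using $M_{\rx,\zeta}$ and $J(V_\rx^{-1})$); that is exactly what you need, so your appeal to ``an inspection of its proof'' is the right move, but you should not invoke the full equality $\Psi_{\sg,\psi}=\Psi_{\sg,\mathrm{std}}$. Second, your case-splitting on $\sg$ for the final $L^2$-boundedness step is unnecessary: for every $\sg\in[0,1]$ one has $S^{0,0}_\sg(\R^{2n})\subseteq S^0_{0,0}(\R^{2n})$, so the Calder\'on--Vaillancourt theorem (which is what the paper's reference to H\"ormander covers) settles all cases at once.
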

\begin{proof} Since $(H_V)$ is satisfied for $V_{\rx}^{-1}$, the proof of the previous proposition entails that $\Psi_{\sg,\psi}^{0,0} \subseteq \Psi_{\sg,std}^{0,0}$, so the result follows from the $L^2$-continuity of standard pseudodifferential operators \cite{Hormander}.
\end{proof}

\subsection{Composition of pseudodifferential operators}
\label{composec}
The goal of this section is to prove that pseudodifferential operators of $\Psi_\sigma^\infty$ are stable under composition without using the hypothesis of the previous section, and to obtain an adapated symbol composition formula. We shall adapt to our situation a technique used for Fourier integral operators in Coriasco \cite{Coriasco}, Ruzhansky and Sugimoto \cite{Ruzhansky3,Ruzhansky}.

Let us note for $(x,\xi)\in TM$ and $\xi'\in T_{\psi_x^{-\xi}}(M)$,
$\psi_{x,\xi,\xi'}:=\psi_{\psi_x^{-\xi}}^{-\xi'}$, $r_x(\xi,\xi'):= \psi_x^{-1}(\psi_{x,\xi,\xi'})$ and $q_x(\xi,\xi'):=\psi^{-1}_{\psi_{x,\xi,\xi'}}(\psi_x^{-\xi})$.
 We define $V_x$ the $2n$ dimensional smooth manifold as $V_x:=\set{(\xi,\xi')\in T_x(M)\times \cup_{y\in M}T_y(M) \ | \ \xi'\in T_{\psi_x^{-\xi}}(M)}$. Each $V_x$ manifold is diffeomorphic to $\R^{2n}$ via the map, defined for any fixed frame $(z,\bfr)$, $n_{z,V_x}^\bfr(\xi,\xi'):=( M_{z,x}^\bfr(\xi), M_{z,\psi_x^{-\xi}}^{\bfr}(\xi'))$, and has a canonical involutive diffeomorphism $R_x$ defined as 
$$
R_x : (\xi,\xi')\mapsto (r_{x}(\xi,\xi'), q_x(\xi,\xi')) \, .
$$ 
In all the following we fix a frame $(z,\bfr)$, and note also $\psi$ the function $m_{-1}^{z,\bfr}$. We note $\rx^{\zeta,\zeta'}:=\psi(\psi(\rx,\zeta),\zeta')$. For each $\rx\in \R^n$, $R_\rx:=n_{z,V_{(n_z^\bfr)^{-1}(\rx)}}^{\bfr}\circ R_{(n_z^\bfr)^{-1}(\rx)} \circ (n_{z,V_{(n_z^\bfr)^{-1}(\rx)}}^{\bfr})^{-1}$ is a diffeomorphism on $\R^{2n}$, and we define $R_{\rx}=:(r_\rx,q_\rx)$, $r=r^{z,\bfr}:=(\rx,\zeta,\zeta')\mapsto r_\rx(\zeta,\zeta')$ and $q=q^{z,\bfr}:=(\rx,\zeta,\zeta')\mapsto q_{\rx}(\zeta,\zeta')$.
Remark that $r_{\rx}(\zeta,\zeta')=-\ol{\psi_z^\bfr}(\rx,\rx^{\zeta,\zeta'})=:\ol\psi_\rx \circ \psi_{\psi_\rx(\zeta)}(\zeta')$ and $q_{\rx}(\zeta,\zeta')=-P^{z,\bfr}_{-1,\psi(\rx,\zeta),\zeta'} (\zeta')$. The map $r_{\rx,\zeta}:\zeta'\mapsto r_{\rx}(\zeta,\zeta')$ is a diffeomorphism on $\R^n$ such that $r_{\rx,\zeta}^{-1}=r_{\psi_\rx(\zeta),\ol\psi_{\psi(\rx,\zeta)}(\rx)}$ so that $(dr_{\rx,\zeta})_{\zeta'}^{-1}=(dr_{{\psi_\rx(\zeta),\ol\psi_{\psi(\rx,\zeta)}(\rx)}})_{r_{\rx,\zeta}(\zeta')}$. We will use the shorthand $\tau:=(\tau_{-1}^{z,\bfr})^{-1}$.

We note $s(\rx,\zeta,\zeta'):=r(\rx,\zeta,\zeta')-\zeta$. We have $s(\rx,\zeta,\zeta')=s_{\rx,\zeta}(\zeta')$ where $s_{\rx,\zeta}=T_{-\zeta}\circ \ol\psi_\rx \circ \psi_{\psi_{\rx}(\zeta)}$ is a diffeomorphism on $\R^n$ such that $s_{\rx,\zeta}(0)=0$.
We also define 
$$
\varphi_{\rx,\zeta}(\zeta'):=r_{\rx,\zeta}(\zeta')-\zeta-(dr_{\rx,\zeta})_0(\zeta')
$$
so that $\varphi_{\rx,\zeta}(0)=0$ and $(d\varphi_{\rx,\zeta})_0=0$, and 
$$
V(\rx,\zeta,\zeta'):= (dr_{\rx,\zeta})_{\zeta'}
$$
as a smooth function from $\R^{3n}$ into $\M_n(\R)$. We shall note $(\rx,\zeta) \mapsto L_{\rx,\zeta}:=-\,^t(dr_{\rx,\zeta})_0$.

We define $ \O_{\sigma,\kappa,\eps_0,\eps_1,c}^{l,w_0,w_1}(\mathfrak{E})$, where $c\in \N$, $l\in \R$, $w:=(w_0,w_1)\in\R^2_+$, $\eps:=(\eps_0,\eps_1)$, $\eps_0\geq 0$, $\eps_1>0$, $\sigma\in [0,1]$ and $\kappa\geq 0$, as the space of smooth functions $g$ from $\R^{3n}$ into $\mathfrak{E}$ such that for any $3n$-multi-index $\nu=(\mu,\ga)\in \N^{2n}\times \N^n$, there exists $C_\nu>0$
such that for any $(\rx,\zeta,\zeta')\in \R^{3n}$, $\norm{\del^{\nu} g (\rx,\zeta,\zeta')}\leq C_{\nu} \langle \rx\rangle^{\sigma(l- |\mu|-\eps_1 |\ga|_c)}\langle \zeta\rangle^{w_0+\ka|\mu|+\eps_0|\ga|} \langle\zeta' \rangle^{w_1+\kappa|\nu|}$. Here, we denoted $|\ga|_c:= 0$ if $|\ga|<c$ and $|\ga|_c:=|\ga|-c$ if $|\ga|\geq c$. We note $ \O_{\sg,\ka,\eps}(\mathfrak{E}):=\cup_{c,l,w} \O_{\sigma,\kappa,\eps,c}^{l,w}(\mathfrak{E})$. We check that for any multi-indices $\ga,\ga'$ and $c,c'\in \N$, $|\ga|_c+|\ga'|_c\geq |\ga+\ga'|_{c+c'}$, and $|\ga+\ga'|_c\geq |\ga|_c+|\ga'|_c$.
Thus, $ \O_{\sg,\ka,\eps}(\R)$, $\O_{\sg,\ka,\eps}(\mathcal{M}_{p}(\R))$ and $\O_{\sg,\ka,\eps,z}:=\O_{\sg,\ka,\eps}(L(E_z))$ are algebras (graduated by the parameters $c$, $l$, $w_0$ and $w_1$) and $\del^\nu \O_{\sg,\ka,\eps,c}^{l,w}(\mathfrak{E})\subseteq \O_{\sg,\ka,\eps,c}^{l-|\mu|-\eps_1|\ga|_c,w_0+\ka|\mu|+\eps_0|\ga|,w_1+\ka|\nu|}(\mathfrak{E})$.
If $f\in \O_{\sg,\ka,\eps,c}^{0,w}(\mathfrak{E})$, then $(\rx,\zeta)\mapsto f(\rx,\zeta,0)\in E^{w_0}_{\sg,\ka}(\mathfrak{E})$, and if $f\in \O_{\sg,\ka,\eps,c,z}^{l,w}$, then $(\rx,\zeta,\vth)\mapsto f(\rx,\zeta,0) \in \Pi_{\sg,\ka,z}^{l,w_0,0}$. Remark that any monomial of the form $(\rx,\zeta,\zeta')\mapsto \zeta'^\b$ where $\b \in \N^n$, is in $\O_{\sg,\ka,\eps,|\b|}^{0,0,|\b|}(\R)$ for any $\ka\geq 0$ and $\eps_0\geq 0$, $\eps_1>0$.

In the definition of $S'_\sg$ bounded geometry, we only require a polynomial control over the $\ol\psi_{z}^\bfr$ functions. It appears that for the theorem of composition, a stronger control over these functions is important. We thus introduce the following:

\begin{defn}
\label{Csigma}
We shall say that $(C_\sg)$ is satisfied if there is a frame $(z,\bfr)$, $(\ka_v, w_v)\in \R^2_+$ with $\ka_v\geq 1$, and $\eps_v\in ]0,1[$, such that 
\begin{align}
\label{CHyp1}
V \in \O_{\sg,\ka_v,\eps_v,\eps_v,0}^{0,0,w_v}(\M_n(\R))\,,  \quad\, \text{and}\quad (d\psi_{z,\rx}^\bfr)_\zeta ,\,(d\ol\psi_{z,\rx}^\bfr)_\ry = \O(1)\, . 
\end{align}
\end{defn}
\noindent In particular $(C_\sg)$ entails that $(dr_{\rx,\zeta})_{0}$ and thus $L$ are in $E_{\sg,\ka_v}^0(\M_n(\R))$. 

We note $\RR_{\sg,\ka,\eps_1}^{w_0,w_1}(\mathfrak{E})$ ($\eps_1>0$) as the space of smooth functions $g$ such that for any nonzero $\nu=(\mu,\ga)\in \N^{2n}\times \N^n$, $\del^\nu g  = \O(\langle \rx\rangle^{\sg(1-|\mu|-\eps_1|\ga|)}\langle \zeta\rangle^{w_0 + \ka(|\nu|-1)} \langle \zeta'\rangle^{w_1+\ka(|\nu|-1)})$. It follows from $(C_\sg)$ that $r\in \cup_{w_0,w_1}\RR_{\sg,\ka_v,\eps_v/2}^{w_0,w_1}(\R^n)$.

The following lemma will give us the link between the the $\O$, $\RR$, $H$, $E$ spaces and the behaviour under composition.

\begin{lem}
\label{htilde}

$(i)$ Let $f\in H_{\sg,\ka}^w(\mathfrak{E})$ (resp. $E_{\sg,\ka}^w(\mathfrak{E})$) and $g\in  \RR_{\sg,\ka,\eps_1}^{w_0,w_1}(\R^{2n})$ such that $g_2(\rx,\zeta,\zeta')=\O(\langle\zeta \rangle^{k_2}\langle \zeta'\rangle^{k'_2})$ for a $(k_2,k'_2)\in \R^2_+$ and, if $\sg\neq 0$, $\langle g_1(\rx,\zeta,\zeta')\rangle\geq c\langle \rx\rangle \langle \zeta\rangle^{-k_1}\langle \zeta'\rangle^{-k_1'}$, for a $(k_1,k'_1)\in \R^2_+$ and $c>0$. Then, $f\circ g\in \RR_{\sg,\ka_H,\eps_1}^{w_0+k_2w,w_1+k'_2 w}(\mathfrak{E})$ (resp. $\O_{\sg,\ka_E,\ka_E,\eps_1,0}^{0,k_2 w,k'_2 w}(\mathfrak{E})$) where $\ka_H:=\ka+\max\set{|w_0+k_1\sg +k_2\ka|,|w_1+k'_1\sg +k'_2\ka|}$ and $\ka_E:= \ka+\max\set{|w_0+k_1\sg+(k_2-1)\ka|,|w_1+k'_1\sg+(k'_2-1)\ka|}$.

\noindent $(ii)$ $(\rx,\zeta,\zeta')\mapsto (\psi(\rx,\zeta),\zeta') \in \RR_{\sg,\ka_\psi,1}^{w_\psi,0}(\R^{2n})$ and $(\rx,\zeta,\zeta')\mapsto \rx^{\zeta,\zeta'} \in \RR_{\sg,\ka_\psi,1}$ for a $(\ka_\psi,w_\psi)\in \R^2_+$.

\noindent $(iii)$ The functions $q$, $(\rx,\zeta,\zeta')\mapsto (P^{z,\bfr}_{-1,\psi(\rx,\zeta),\zeta'})^{-1}$ and $(\rx,\zeta,\zeta')\mapsto \det (P^{z,\bfr}_{-1,\psi(\rx,\zeta),\zeta'})^{-1}$ are respectively in $\RR_{\sg,\ka_q,1}(\R^n)$, $\O^{0,0,0}_{\sg,\ka_q,\ka_q,1,0}(\M_n(\R))$, and $\O^{0,0,0}_{\sg,\ka_q,\ka_q,1,0}(\R)$, for a $\ka_q\geq 0$.
Moreover, there exists $C>0$ such that for any $(\rx,\zeta,\zeta')\in \R^{3n}$, $\norm{q_\rx(\zeta,\zeta')}\leq C\langle \zeta' \rangle$.

\noindent $(iv)$ $(\rx,\zeta,\zeta')\mapsto \tau(\rx^{\zeta,\zeta'},q_\rx(\zeta,\zeta'))$ is in $\O^{0,0,0}_{\sg,\ka_\tau,\ka_\tau,1,0,z}$ for a $\ka_\tau\geq 0$.
\end{lem}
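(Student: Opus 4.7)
The lemma is essentially a bookkeeping exercise: it combines the Faà di Bruno formula (Theorem~\ref{FaaCS}) with the various growth hypotheses, extending Lemma~\ref{lemGsigma} to the three-variable situation. I will outline the strategy, indicating where the main work lies.

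\medskip

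For $(i)$, the plan is to apply Theorem~\ref{FaaCS} and write, for $\nu=(\mu,\ga)\neq 0$,
\[
\del^\nu(f\circ g)\;=\;\sum_{1\leq|\la|\leq|\nu|}(\del^\la f)\circ g\ P_{\nu,\la}(g),
\]
where $P_{\nu,\la}(g)$ is a linear combination of products $\prod_j(\del^{l^j}g)^{k^j}$ with $\sum_j k^j=\la$ and $\sum_j|k^j|l^j=\nu$. Each factor $\del^{l^j}g$ is controlled by the $\RR_{\sg,\ka,\eps_1}^{w_0,w_1}$-hypothesis on $g$; the factor $(\del^\la f)\circ g$ is estimated using the $H^w$ (resp.\ $E^w$) bound on $f$ together with the lower bound $\langle g_1\rangle\geq c\langle \rx\rangle\langle\zeta\rangle^{-k_1}\langle\zeta'\rangle^{-k_1'}$ and the upper bound $g_2=O(\langle\zeta\rangle^{k_2}\langle\zeta'\rangle^{k_2'})$. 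Using the identity $\sum_j|k^j|(|l^j|-1)=|\nu|-|\la|$ (valid for any partition arising in Faà di Bruno) one recovers the claimed exponents, with $\ka_H$ (resp.\ $\ka_E$) emerging from tracking the maximum weight that can accumulate in $\langle\zeta\rangle$ and $\langle\zeta'\rangle$. The $(-1)$ shift in the $\RR$-target (as opposed to the $\O$-target) comes from the factor $|\la|-1$ in the $H$-estimate of $\del^\la f$, which is absent in the $E$-estimate; this is the only substantive distinction between the two cases.

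\medskip

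For $(ii)$, the map $(\rx,\zeta,\zeta')\mapsto(\psi(\rx,\zeta),\zeta')$ is trivial in $\zeta'$, so its $\RR$-regularity follows directly from $\psi_z^\bfr\in H_{\sg,\ka_{z,\bfr}}(\R^n)$ together with $(C_\sg)$, which provides $(d\psi_{z,\rx}^\bfr)_\zeta=O(1)$. For $\rx^{\zeta,\zeta'}=\psi(\psi(\rx,\zeta),\zeta')$, I would apply part $(i)$ with $f=\psi$ and $g$ the previous map, using Lemma~\ref{lem-Phi-la}$(iii)$ to secure $\langle\psi(\rx,\zeta)\rangle\geq c\langle\zeta\rangle^\eps\langle\rx\rangle^{-1}$ and the polynomial upper bound on $\psi$ coming from $S_\sg$-linearization. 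The exponents $(\ka_\psi,w_\psi)$ are produced by the proof of $(i)$.

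\medskip

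For $(iii)$, I would write $q_\rx(\zeta,\zeta')=-P^{z,\bfr}_{-1,\psi(\rx,\zeta),\zeta'}(\zeta')$ and apply $(i)$ (in the $E\to\O$ direction) to $P^{z,\bfr}\in E^0_{\sg,\ka_{z,\bfr}}(\M_n(\R))$ composed with the $\RR$-map from $(ii)$; the explicit monomial factor $\zeta'_j$ lies in $\O^{0,0,1}_{\sg,\ka,\eps,1}(\R)$ for any $\eps_1\leq 1$, which produces the $\eps_1=1$ index in the target. The same argument applied to $(P^{z,\bfr})^{-1}\in E^0_{\sg,\ka_{z,\bfr}}$ handles the inverse, and its determinant stays in $\O_{\sg,\ka_q,\ka_q,1,0}^{0,0,0}(\R)$ because this is an algebra. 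The upper bound $\norm{q_\rx(\zeta,\zeta')}\leq C\langle\zeta'\rangle$ is immediate from $P^{z,\bfr}=O(1)$. For $(iv)$, I would apply $(i)$ once more to $\tau=(\tau_{-1}^{z,\bfr})^{-1}\in E^0_{\sg,\ka_{z,\bfr}}(L(E_z))$ composed with the $\RR$-valued map $(\rx,\zeta,\zeta')\mapsto(\rx^{\zeta,\zeta'},q_\rx(\zeta,\zeta'))$, using $(ii)$ and $(iii)$ to verify the hypotheses of $(i)$; the resulting $\O^{0,0,0}_{\sg,\ka_\tau,\ka_\tau,1,0,z}$ regularity comes from the $E\to\O$ half of $(i)$.

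\medskip

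The principal technical obstacle is $(i)$: one must check not merely that $f\circ g$ has polynomially bounded derivatives, but that the precise $\RR$ (resp.\ $\O$) exponents are realized with the stated $\ka_H$ (resp.\ $\ka_E$). The care is required because each derivative of $g$ can produce a worst-case power $\ka(|\nu|-1)$ in both $\zeta$ and $\zeta'$, while composition with $f$ multiplies in up to $|k_2 w|$ and $|k_2' w|$ more in each variable; these must combine without loss into single $\ka_H$ and $\ka_E$ constants. Once $(i)$ is set up correctly, the remaining parts are essentially automatic applications.
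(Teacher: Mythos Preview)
Your proposal is correct and follows essentially the same route as the paper: Fa\`a di Bruno plus the identity $\sum_j|k^j|(|l^j|-1)=|\nu|-|\la|$ for $(i)$, then successive applications of $(i)$ for $(ii)$--$(iv)$. One small correction in $(ii)$: to invoke $(i)$ for $\rx^{\zeta,\zeta'}=\psi(\psi(\rx,\zeta),\zeta')$ you need the lower bound $\langle\psi(\rx,\zeta)\rangle\geq c\langle\rx\rangle\langle\zeta\rangle^{-r}$ (with $\langle\rx\rangle$ in the numerator), which is Lemma~\ref{lem-Phi-la}$(ii)$, not $(iii)$; the bound you quoted has the roles of $\rx$ and $\zeta$ reversed and would not feed into the hypothesis of $(i)$.
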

\begin{proof}
$(i)$ If $\nu=(\a,\b,\ga)\neq 0$ is a $3n$-multi-index, we have $\del^\nu f\circ g = \sum_{1\leq |\nu'|\leq |\nu|} P_{\nu,\nu'}(g) (\del^{\nu'}f)\circ g$, with $P_{\nu,\nu'}(g)$ a linear combination of terms of the form $\prod_{j=1}^s (\del^{l^j}g)^{k^j}$, with $1\leq s \leq |\nu|$, $\sum_{1}^s l^j |k^j|=\nu$, $\sum_1^s k^j=\nu'$.
As a consequence, we get the following estimate for any $1\leq |\nu|\leq |\nu'|$, $P_{\nu,\nu'}(g)=\O(\langle \rx\rangle^{\sg(|\nu'|-|\mu|-\eps_1|\ga|)} \langle \zeta\rangle^{w_0|\nu'|+\ka(|\nu|-|\nu'|)}\langle \zeta'\rangle^{w_1|\nu'|+\ka(|\nu|-|\nu'|)})$.
Moreover, for any $1\leq |\nu'|\leq |\nu|$, there is $C_\nu>0$ such that for any $(\rx,\zeta,\zeta')\in \R^{3n}$, the following estimate is valid $\norm{(\del^{\nu'}f) \circ g(\rx,\zeta,\zeta')}\leq C_\nu \langle\rx\rangle^{-\sg(|\nu'|-1)} \langle \zeta\rangle^{(k_1\sg+k_2\ka)(|\nu'|-1)+k_2w}\langle \zeta'\rangle^{(k_1'\sg+k'_2\ka)(|\nu'|-1)+k'_2w}$ (resp. $\norm{(\del^{\nu'}f) \circ g(\rx,\zeta,\zeta')}\leq C_\nu \langle\rx\rangle^{-\sg|\nu'|} \langle \zeta\rangle^{(k_1\sg+k_2\ka)|\nu'|+k_2w}\langle \zeta'\rangle^{(k_1'\sg+k'_2\ka)|\nu'|+k'_2w}$). The result follows.

\noindent $(ii)$ By hypothesis, $\psi\in H^{w_\psi}_{\sg,\ka_\psi}$. We deduce that $(\rx,\zeta,\zeta')\mapsto \psi(\rx,\zeta)\in \RR^{w_\psi,0}_{\sg,\ka_\psi,1}$ and the first statement now follows from $(\rx,\zeta,\zeta')\mapsto \zeta'\in \RR^{0,0}_{\sg,\ka_\psi,1}$. The second statement follows from $(i)$.

\noindent $(iii)$ Since $q_{\rx}(\zeta,\zeta')=-P^{z,\bfr}_{-1,\psi(\rx,\zeta),\zeta'}(\zeta')$, the fact that $q_\rx\in  \RR_{\sg,\ka_q,1}(\R^n)$ for a $\ka_q\geq 0$  is a consequence of $(i)$, $(ii)$ and Lemma \ref{lemGsigma} $(iii)$. We also have by $(i)$ and $(ii)$, $(P^{z,\bfr}_{-1,\psi(\rx,\zeta),\zeta'})^{-1} \in  \O^{0,0,0}_{\sg,\ka_q,\ka_q,1,0}(\M_n(\R))$.

\noindent $(iv)$ Since $\tau\in E^0_{\sg,\ka}(L(E_z))$ for a $\ka\geq 0$,
the result follows $(i)$, $(ii)$, $(iii)$ and the estimate $\langle \rx^{\zeta,\zeta'}\rangle \geq c \langle \rx\rangle \langle \zeta\rangle^{-k} \langle \zeta' \rangle^{-k}$ for $c,k>0$.
\end{proof}

\begin{lem}
\label{Csgr}
Suppose $(C_\sg)$. Then 

\noindent (i) $s,\varphi \in \O_{\sg,\ka_v,\eps_v,\eps_v,1}^{0,0,w_s}(\R^n)$ and $\varphi\in \O_{\sg,\ka_v,\eps_v,\eps_v,2}^{-\eps_v,\eps_v,w_\varphi}(\R^n)$ where $w_s:=w_v+1$ and $w_\varphi:=2+w_v+\ka_v$.

\noindent (ii) $V=(dr_{\rx,\zeta})_{\zeta'}$  and  $(dr_{\rx,\zeta})^{-1}_{\zeta'}$ are bounded on $\R^{3n}$.

\noindent (iii) The function $J(R):(\rx,\zeta,\zeta')\mapsto J(R_\rx)(\zeta,\zeta')$ is in $\cup_{\ka,w_0,w_1,\eps_0,\eps_1}\O_{\sg,\ka,\eps_0,\eps_1,0}^{0,w_0,w_1}(\R)$ and $(\rx,\zeta,\zeta')\mapsto \tau (\rx,r_\rx(\zeta,\zeta'))$ is in $\O_{\sg,\ka_\tau,\ka_\tau,\eps_v/2,0,z}^{0,0,0}$ for $\ka_\tau\geq 0$.
\end{lem}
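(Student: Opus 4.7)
The plan is to handle (ii) first, then bootstrap the Taylor integral expansions of $r$ to prove (i), and finally reduce (iii) to Lemma~\ref{htilde}.

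For (ii) I would differentiate the identity $r_{\rx,\zeta}=\ol{\psi_\rx}\circ\psi_{\psi_\rx(\zeta)}$ using the chain rule to get
$V(\rx,\zeta,\zeta')=(d\ol{\psi_\rx})_{\psi_{\psi_\rx(\zeta)}(\zeta')}\,(d\psi_{\psi_\rx(\zeta)})_{\zeta'}$, which is uniformly bounded by the second half of $(C_\sg)$. For the inverse, I would use the involutive identity $r_{\rx,\zeta}^{-1}=r_{\psi_\rx(\zeta),\ol\psi_{\psi(\rx,\zeta)}(\rx)}$ noted just before the statement, so that $(dr_{\rx,\zeta})^{-1}_{\zeta'}$ equals $V$ evaluated at another point and is bounded by the same estimate.

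For (i), since $s_{\rx,\zeta}(0)=0$ and $\del_{\zeta'}s=V$, Taylor's formula gives $s(\rx,\zeta,\zeta')=\int_0^1 V(\rx,\zeta,t\zeta')(\zeta')\,dt$. Differentiating under the integral and applying Leibniz to the product $V(\rx,\zeta,t\zeta')\cdot\zeta'$, only two situations arise: either all $\zeta'$-derivatives hit $V$, yielding a factor with the full estimate from $V\in\O^{0,0,w_v}_{\sg,\ka_v,\eps_v,\eps_v,0}$ multiplied by an extra $\langle\zeta'\rangle$; or exactly one $\zeta'$-derivative is absorbed by the linear factor $\zeta'$, producing a term with one fewer derivative on $V$. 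In both cases the resulting bound sits inside $\O^{0,0,w_s}_{\sg,\ka_v,\eps_v,\eps_v,1}$: the shift $w_v\to w_s=w_v+1$ accounts for the extra $\zeta'$ factor, and the $|\ga|_1$ in the target (rather than $|\ga|$) accommodates the absorption case. For $\varphi=s-V(\rx,\zeta,0)(\zeta')$, note that $V(\cdot,\cdot,0)\in E^0_{\sg,\ka_v}(\M_n(\R))$ by setting $\zeta'=0$ in the $V$ estimate, so $(\rx,\zeta,\zeta')\mapsto V(\rx,\zeta,0)(\zeta')$ lies in $\O^{0,0,1}_{\sg,\ka_v,\eps_v,\eps_v,1}$, and the first claim for $\varphi$ follows by the algebra structure of $\O$.

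For the stronger estimate $\varphi\in\O^{-\eps_v,\eps_v,w_\varphi}_{\sg,\ka_v,\eps_v,\eps_v,2}$, I would use the second-order Taylor remainder
\[
\varphi_{\rx,\zeta}(\zeta')=\int_0^1(1-t)\,(\del_{\zeta'}V)(\rx,\zeta,t\zeta')(\zeta',\zeta')\,dt,
\]
so that the integrand is bilinear in $\zeta'$, and the gradient $\del_{\zeta'}V$ carries an additional factor $\langle\rx\rangle^{-\sg\eps_v}\langle\zeta\rangle^{\eps_v}$ compared to $V$ itself (the one-derivative cost in the $\O$-class of $V$). This is exactly the shift $l=0\to -\eps_v$ and $w_0=0\to\eps_v$ in the target. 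Applying Leibniz again, each $\zeta'$-derivative can now be absorbed by one of the two linear $\zeta'$ factors at a cost of one fewer derivative on $\del_{\zeta'}V$; this is the mechanism that converts $|\ga|_1$ into $|\ga|_2$. The key bookkeeping is: $c=2$ because two $\zeta'$ factors are available; the extra $\langle\zeta'\rangle^2$ contributes to $w_\varphi=2+w_v+\ka_v$; and the $+\ka_v$ in $w_\varphi$ compensates for the one derivative taken on $V$ before integration.

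Finally, for (iii), $J(R_\rx)(\zeta,\zeta')$ is a polynomial in the partial derivatives of $r$ and $q$. By Lemma~\ref{htilde}(iii), $q\in\RR_{\sg,\ka_q,1}(\R^n)$ with $q_\rx(\zeta,\zeta')=\O(\langle\zeta'\rangle)$, and $r$ enjoys analogous $\RR$-estimates since $r=\ol{\psi_\rx}\circ \psi_{\psi_\rx(\zeta)}(\zeta')$ is the composition already analyzed in Lemma~\ref{htilde}(ii) with $\rx^{\zeta,\zeta'}$. Since the $\O$ spaces form a graded algebra under pointwise multiplication, the first-order derivatives plugged into the determinant yield $J(R)\in\O^{0,w_0,w_1}_{\sg,\ka,\eps_0,\eps_1,0}(\R)$ for some choice of parameters. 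For the function $\tau(\rx,r_\rx(\zeta,\zeta'))$, I would apply Lemma~\ref{htilde}(i) with $f:=\tau=(\tau^{z,\bfr}_{-1})^{-1}\in E^0_{\sg,\ka_v}(L(E_z))$ (Lemma~\ref{lem-Phi-la}(i)) and $g:=(\rx,r_\rx(\zeta,\zeta'))$, whose components satisfy the needed $\RR$-type bounds with $\eps_1=\eps_v/2$ (from the $\RR$-class of $r$) and with $\langle r_\rx(\zeta,\zeta')\rangle \geq c\langle\rx\rangle\langle\zeta\rangle^{-k}\langle\zeta'\rangle^{-k}$ coming from the bound on $\psi^\bfr_z$ and Lemma~\ref{lem-Phi-la}(iii). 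The main technical obstacle throughout is the careful distribution of derivatives in the Leibniz expansions in (i), where I must track simultaneously three indices $(l,w_0,w_1)$ together with the absorption parameter $c$; the trick is to split the multi-indices acting on the linear $\zeta'$-factors from those acting on the matrix-valued kernel $V$ (or $\del_{\zeta'}V$), as the target $\O$-class precisely encodes the number of such admissible absorptions via $|\ga|_c$.
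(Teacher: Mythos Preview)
Your overall strategy matches the paper's proof closely: part (ii) via the chain rule on $r_{\rx,\zeta}=\ol\psi_\rx\circ\psi_{\psi_\rx(\zeta)}$, part (i) via first- and second-order Taylor integral remainders of $r$ in $\zeta'$, and part (iii) via Lemma~\ref{htilde}. Two remarks.

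\medskip
\textbf{Part (i): the paper's argument is cleaner.} Rather than doing the Leibniz bookkeeping inside the integral, the paper observes once and for all that the integrals $\int_0^1 \del_{\zeta'_i}r_{\rx,\zeta}(t\zeta')\,dt$ (resp.\ $\int_0^1(1-t)\del^\b_{\zeta'}r_{\rx,\zeta}(t\zeta')\,dt$ for $|\b|=2$) inherit the $\O$-class of $V$ (resp.\ of $\del_{\zeta'}V$) with $c=0$, and that the monomials $\zeta'_i$ and $(\zeta')^\b$ lie in $\O^{0,0,|\b|}_{\sg,\ka_v,\eps_v,\eps_v,|\b|}$. The conclusion then follows immediately from the graded-algebra structure $\O^{l,w_0,w_1}_{\cdot,c}\cdot\O^{l',w_0',w_1'}_{\cdot,c'}\subseteq\O^{l+l',w_0+w_0',w_1+w_1'}_{\cdot,c+c'}$. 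This is logically equivalent to your derivative-splitting argument but avoids the case analysis you describe.

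\medskip
\textbf{Part (iii): one estimate is stated with the wrong direction.} In applying Lemma~\ref{htilde}(i) with $g=(\rx,r_\rx(\zeta,\zeta'))$, the hypothesis you must check on the second component $g_2=r$ is an \emph{upper} bound $r_\rx(\zeta,\zeta')=\O(\langle\zeta\rangle^{k_2}\langle\zeta'\rangle^{k_2'})$, not a lower bound. The lower bound in that lemma concerns $g_1$, here equal to $\rx$, for which it is trivial. The lower bound you wrote, $\langle r_\rx(\zeta,\zeta')\rangle\geq c\langle\rx\rangle\langle\zeta\rangle^{-k}\langle\zeta'\rangle^{-k}$, is in fact false (take $\zeta=\zeta'=0$, so $r=0$). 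The paper instead supplies the required upper bound $r_\rx(\zeta,\zeta')=\O(\langle\zeta\rangle\langle\zeta'\rangle^{w_v})$, which follows from the boundedness of $V$ in (ii) together with $r_\rx(\zeta,0)=\zeta$. With that correction your application of Lemma~\ref{htilde}(i) goes through.
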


\begin{proof}
\noindent $(i)$ We have $s_{\rx,\zeta}(\zeta')=\sum_{i=1}^n\zeta'_i \int_0^1 \del_{\zeta'_i} r_{\rx,\zeta}(t\zeta')\,dt$. Since $V \in \O_{\sg,\ka_v,\eps_v,0}^{0,0,w_v}(\M_n(\R))$ each function $(\rx,\zeta,\zeta')\mapsto\int_0^1 \del_{\zeta'_i} r_{\rx,\zeta}(t\zeta')\,dt$ is in $\O_{\sg,\ka_v,\eps_v,0}^{0,0,w_v}(\R^n)$ and thus, since $(\rx,\zeta,\zeta')\mapsto \zeta'_i \in \O_{\sg,\ka_v,\eps_v,1}^{0,0,1}(\R)$, we see that $s \in \O_{\sg,\ka_v,\eps_v,1}^{0,0,w_s}(\R^n)$. We have also $\varphi_{\rx,\zeta}(\zeta')=\sum_{|\b|=2} \tfrac{2}{\b!} (\zeta')^\b \int_{0}^1(1-t)\, \del_{\zeta'}^\b r_{\rx,\zeta}(t\zeta')\,dt$ and each function $(\rx,\zeta,\zeta')\mapsto\int_0^1 (1-t)\, \del_{\zeta'}^\b r_{\rx,\zeta}(t\zeta')\,dt$ is in $\O_{\sg,\ka_v,\eps_v,0}^{-\eps_v,\eps_v,w_v+\ka_v}(\R^n)$. With $(\rx,\zeta,\zeta')\mapsto (\zeta')^\b \in \O_{\sg,\ka_v,\eps_v,2}^{0,0,2}(\R)$, we get $\varphi \in \O_{\sg,\ka_v,\eps_v,2}^{-\eps_v,\eps_v,w_\varphi}(\R^n)$.

\noindent $(ii)$ Direct consequence of $(C_\sg)$ and the following equalities for any $(\rx,\zeta,\zeta')\in \R^{3n}$, $(dr_{\rx,\zeta})_{\zeta'} =(d\ol\psi_{\rx})_{\rx^{\zeta,\zeta'}}(d\psi_{\psi_\rx(\zeta)})_{\zeta'}$ and $(dr_{\rx,\zeta})^{-1}_{\zeta'}=(d\ol\psi_{\psi_\rx(\zeta)})_{\rx^{\zeta,\zeta'}}(d\psi_\rx)_{r_{\rx,\zeta}(\zeta')}$. 

\noindent $(iii)$ The first statement follows from Lemma \ref{htilde} $(ii)$. The second statement follows from Lemma \ref{htilde} $(i)$ and the estimate $r_{\rx}(\zeta,\zeta')=\O(\langle \zeta \rangle \langle \zeta'\rangle^{w_v})$.
\end{proof}

We shall use a generalization to four variables of the $\Pi_{\sg,\ka,z}^{l,w,m}$ spaces of amplitude. We define $\wt \Pi_{\sg,\ka,\eps_1,z}^{l,w_0,w_1,m}$ ($0<\eps_1\leq 1$) as the space of smooth functions $a\in C^{\infty}(\R^{4n},L(E_z))$ such that for any $4n$-multi-index $(\nu,\delta)\in \N^{3n}\times \N^n$, (with $\nu=(\mu,\ga)\in \N^{2n}\times \N^n$) there is $C_{\nu,\delta}>0$  such that for any $(\rx,\zeta,\zeta',\vth)\in \R^{4n}$, 
$$
\norm{\del^{(\nu,\delta)} a (\rx,\zeta,\zeta',\vth)}_{L(E_z)}\leq C_{\nu,\delta} \langle \rx\rangle^{\sg(l-|\mu|-\eps_1|\ga|)} \langle \zeta \rangle^{w_0+\ka|\nu|} \langle \zeta'\rangle^{w_1+\ka|\nu|} \langle \vth\rangle^{m-|\delta|}\, .
$$ 
These spaces have natural Fr\'{e}chet topologies and form a graded topological algebra under pointwise composition.

\begin{lem}
\label{amptilde}
$(i)$ If $a \in \wt \Pi_{\sg,\ka,\eps_1,z}^{l,w_0,w_1,m}$, then $a_{\zeta'=0}:(\rx,\zeta,\vth)\mapsto a(\rx,\zeta,0,\vth)$ is in $\Pi_{\sg,\ka,z}^{l,w_0,m}$.
 
\noindent $(ii)$ If $h\in  \O_{\sg,\ka,\eps_0,\eps_1,0,z}^{l,w_0,w_1}$, then $(\rx,\zeta,\zeta',\vth)\mapsto h(\rx,\zeta,\zeta')$ is in $\wt\Pi_{\sg,\max\set{\ka,\eps_0},\eps_1,z}^{l,w_0,w_1,0}$.

\noindent $(iii)$ There is $\ka_\Xi,k_1 \geq 0$ such that for any $b\in S^{l,m}_{\sg,z}$, the application $b\circ \wt \Xi$, where $\wt \Xi(\rx,\zeta,\zeta',\vth):=(\rx^{\zeta,\zeta'},-\wt P_{-1,\psi(\rx,\zeta),\zeta'}^{z,\bfr}(\vth))$, is in $\wt\Pi_{\sg,\ka_{\Xi},1,z}^{l,\sg k_1|l|,\sg k_1|l|,m}$.

\end{lem}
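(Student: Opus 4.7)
\medskip

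\noindent\textbf{Plan.} The three statements are treated in order, with (iii) carrying the bulk of the work.

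\emph{Part (i)} is immediate from the chain rule. For any multi-index $(\a,\b,\delta)\in\N^n\times\N^n\times\N^n$,
\[
\del^{(\a,\b,\delta)} a_{\zeta'=0}(\rx,\zeta,\vth) = (\del^{(\a,\b,0,\delta)}a)(\rx,\zeta,0,\vth),
\]
so the defining inequality for $\wt\Pi^{l,w_0,w_1,m}_{\sg,\ka,\eps_1,z}$, evaluated with the $\zeta'$-multi-index equal to $0$ and at $\zeta'=0$ (where $\langle 0\rangle^{w_1+\ka|(\a,\b)|}=1$), reproduces exactly the estimate placing $a_{\zeta'=0}$ in $\Pi^{l,w_0,m}_{\sg,\ka,z}$.

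\emph{Part (ii)} is nearly as short. When $|\delta|>0$, $\del^{(\nu,\delta)} h=0$ and the bound is trivial. For $|\delta|=0$, the defining estimate for $\O^{l,w_0,w_1}_{\sg,\ka,\eps_0,\eps_1,0,z}$ (note $c=0$, hence $|\ga|_0=|\ga|$) reads
\[
\|\del^{(\mu,\ga)} h\|\leq C\, \langle\rx\rangle^{\sg(l-|\mu|-\eps_1|\ga|)}\, \langle\zeta\rangle^{w_0+\ka|\mu|+\eps_0|\ga|}\, \langle\zeta'\rangle^{w_1+\ka|\nu|}.
\]
Using $\ka|\mu|+\eps_0|\ga|\leq \max\{\ka,\eps_0\}|\nu|$ and $\ka|\nu|\leq \max\{\ka,\eps_0\}|\nu|$, one obtains the required estimate for $\wt\Pi^{l,w_0,w_1,0}_{\sg,\max\{\ka,\eps_0\},\eps_1,z}$.

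\emph{Part (iii)} is the substantial step. Writing $\wt\Xi=(\wt\Xi_1,\wt\Xi_2)$ with $\wt\Xi_1(\rx,\zeta,\zeta',\vth)=\rx^{\zeta,\zeta'}$ and $\wt\Xi_2(\rx,\zeta,\zeta',\vth)=-\wt P^{z,\bfr}_{-1,\psi(\rx,\zeta),\zeta'}(\vth)$, Faa di Bruno (Theorem \ref{FaaCS}) gives
\[
\del^{(\nu,\delta)}(b\circ\wt\Xi) = \sum_{1\leq|\la|\leq|\nu|+|\delta|} P_{(\nu,\delta),\la}(\wt\Xi)\,(\del^\la b)\circ\wt\Xi,
\]
with $\la=(\la_1,\la_2)\in\N^n\times\N^n$ and each $P_{(\nu,\delta),\la}(\wt\Xi)$ a finite linear combination of products of derivatives of the components of $\wt\Xi$. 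The ingredients one assembles are the following. First, Lemma \ref{htilde}(ii) gives $\wt\Xi_1\in\RR^{w_\psi,0}_{\sg,\ka_\psi,1}$ for some $w_\psi,\ka_\psi$, so each derivative of $\wt\Xi_1$ in any of the $(\rx,\zeta,\zeta')$ directions gains a factor of $\langle\rx\rangle^{-\sg}$ (this is crucial because $\eps_1=1$) with at worst polynomial growth in $\langle\zeta\rangle,\langle\zeta'\rangle$. Next, from Lemma \ref{htilde}(iii) one has $(P^{z,\bfr}_{-1,\psi(\cdot,\cdot),\cdot})^{-1}\in\O^{0,0,0}_{\sg,\ka_q,\ka_q,1,0}(\M_n(\R))$, and the same class contains its transpose $\wt P$; since $\wt\Xi_2=-\wt P(\vth)$ is linear in $\vth$ and $\del_\vth^2\wt\Xi_2=0$, every $\vth$-derivative either replaces a factor of $\vth$ by a fixed matrix coefficient or annihilates it, and each $(\rx,\zeta,\zeta')$-derivative of those coefficients again gains $\langle\rx\rangle^{-\sg}$. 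Finally, from the proof of Lemma \ref{htilde}(iv) one has two-sided bounds $c\langle\rx\rangle\langle\zeta\rangle^{-k}\langle\zeta'\rangle^{-k}\leq \langle\wt\Xi_1\rangle\leq C\langle\rx\rangle\langle\zeta\rangle^{k}\langle\zeta'\rangle^{k}$ for some constants $c,C,k>0$, and $|\wt\Xi_2|\leq C\langle\vth\rangle$.

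The computation then proceeds by substituting the symbol estimate
$\|(\del^\la b)\circ\wt\Xi\|\leq C_\la \langle\wt\Xi_1\rangle^{\sg(l-|\la_1|)}\langle\wt\Xi_2\rangle^{m-|\la_2|}$
into the Faa di Bruno expansion. Controlling $\langle\wt\Xi_1\rangle^{\sg(l-|\la_1|)}$ requires using the upper bound when $l-|\la_1|\geq 0$ and the lower bound otherwise, giving in both cases $\langle\rx\rangle^{\sg(l-|\la_1|)}\langle\zeta\rangle^{\sg k|l-|\la_1||}\langle\zeta'\rangle^{\sg k|l-|\la_1||}$. Using $|l-|\la_1||\leq|l|+|\la_1|$ with $|\la_1|\leq|\nu|+|\delta|$ splits the $\langle\zeta\rangle,\langle\zeta'\rangle$ exponents into a fixed $\sg k|l|$ term and a piece absorbed into $\ka_\Xi|\nu|$. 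The $\langle\wt\Xi_2\rangle^{m-|\la_2|}$ factor, combined with the $\vth$-derivative bookkeeping above, produces the $\langle\vth\rangle^{m-|\delta|}$ factor (polynomial growth in $\vth$ from $\vth$-prefactors in $P_{(\nu,\delta),\la}(\wt\Xi)$ is cancelled by the corresponding drop in $|\la_2|$). Matching the $\rx$-exponent $\sg(l-|\mu|-|\ga|)$ of the target space $\wt\Pi^{l,\sg k_1|l|,\sg k_1|l|,m}_{\sg,\ka_\Xi,1,z}$ (with $\eps_1=1$) relies precisely on the fact that every derivative in a $(\rx,\zeta,\zeta')$-direction hitting $\wt\Xi_1$, $\wt\Xi_2$, or $\wt P$ supplies exactly one $\langle\rx\rangle^{-\sg}$.

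\emph{Main obstacle.} The substantive difficulty is the index bookkeeping in Faa di Bruno for (iii): one must verify that in every summand produced by the chain rule the count of derivatives in the $\zeta'$-direction is matched by the corresponding gain of $\langle\rx\rangle^{-\sg}$, which is available only because both $\wt\Xi_1$ (via the $\eps_1=1$ of the $\RR_{\sg,\ka_\psi,1}$ class) and $\wt P$ (via the last slot $\eps_1=1$ of $\O^{\cdot,\cdot,\cdot}_{\sg,\ka_q,\ka_q,1,0}$) have symbol-like decay in $\zeta'$. Once this compensation is checked term by term and the constants $\ka_\Xi:=\max\{\ka_\psi,\ka_q\}+\text{cst}$, $k_1:=k$ are selected, the stated membership $b\circ\wt\Xi\in\wt\Pi^{l,\sg k_1|l|,\sg k_1|l|,m}_{\sg,\ka_\Xi,1,z}$ follows.
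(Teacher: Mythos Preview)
Your proposal is correct and follows essentially the same approach as the paper. Parts (i) and (ii) are dismissed by the paper as ``direct'' and your short arguments fill in exactly what is intended; for (iii) both you and the paper apply Faa di Bruno (Theorem~\ref{FaaCS}), invoke Lemma~\ref{htilde}(ii),(iii) for the membership of $\rx^{\zeta,\zeta'}$ and $\wt P$ in the appropriate $\RR$ and $\O$ classes, and use the two-sided estimate $\langle\rx^{\zeta,\zeta'}\rangle\asymp\langle\rx\rangle\langle\zeta\rangle^{\pm k}\langle\zeta'\rangle^{\pm k}$ to control $\langle\wt\Xi_1\rangle^{\sg(l-|\la_1|)}$---the paper's final explicit value $\ka_\Xi=\ka_\psi+\max\{|w_0+\sg k_1-\ka_\psi|,|w_1+\sg k_1-\ka_\psi|\}$ is just a concrete choice of your ``$\max\{\ka_\psi,\ka_q\}+\mathrm{cst}$''. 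One small point: for the $\langle\vth\rangle^{m-|\la_2|}$ factor you state only the upper bound $|\wt\Xi_2|\leq C\langle\vth\rangle$, but when $m-|\la_2|<0$ you also need the lower bound, which follows from $(P^{z,\bfr})^{-1}\in E^0_{\sg,\ka}$ (Definition~\ref{sprime}(ii)); this is implicit in the paper as well.
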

\begin{proof}
$(i)$ and $(ii)$ are direct.

\noindent $(iii)$ If $\mu=(\nu,\delta)\neq 0$ is a $4n$-multi-index, we have $\del^{\mu}(b\circ \wt \Xi)=\sum_{1\leq |\mu'|\leq |\mu| }P_{\mu,\mu'}(\wt \Xi) \, (\del^{\mu'} b )\circ \wt \Xi$ with $P_{\mu,\mu'}(\wt \Xi)$ a linear combination of terms of the form $\prod_{j=1}^s (\del^{l^j}\wt \Xi)^{k^j}$, with $1\leq s \leq |\mu|$, $l^j=(l^{j,1},l^{j,2})\in \N^{3n}\times \N^n$, $k^{j}=(k^{j,1},k^{j,2})\in \N^{n}\times \N^n$, such that $l^{j,2}=0$ for $1\leq j\leq j_1 \leq s$, and $\sum_{1}^s l^j |k^j|=\mu$, $\sum_1^s k^j=\mu'$. We have
$$
(\del^{l^j}\wt\Xi)^{k^j}= \prod_{i=1}^n (\delta_{l^{j,2},0}(\del^{l^{j,1}} \rx^{\zeta,\zeta'})_i )^{k^{j,1}_i}\ \prod_{i=1}^n \big(\sum_{k=1}^n \del^{l^{j,1}}P^{i,k}\ \del^{l^{j,2}} \vth_{k}\big)^{k^{j,2}_i}
$$
where $P^{i,k}$ are the matrix entries of $-\wt P_{-1,\psi(\rx,\zeta),\zeta'}^{z,\bfr}$. By Lemma \ref{htilde} $(ii)$ and $(iii)$, $\rx^{\zeta,\zeta'} \in \RR_{\sg,\ka_\psi,1}^{w_0,w_1}(\R^n)$ and the $P^{i,k}$ are in $\O^{0,0,0}_{\sg,\ka_\psi,\ka_\psi,1,0}(\R)$ for a $(\ka_\psi,w_0,w_1)\in \R^3_+$. We obtain thus the following estimate 
$$
|P_{\mu,\mu'}(\wt \Xi)(\rx,\zeta,\zeta',\vth) |\leq C_\mu \langle \rx\rangle ^{-\sg(|\nu|-|\a'|)} \langle \zeta \rangle^{w_0|\a'|+\ka_\psi(|\nu|-|\a'|)}\langle \zeta'\rangle^{w_1|\a'|+\ka_\psi(|\nu|-|\a'|)} \langle \vth\rangle^{|\b'|-|\delta|}
$$
with $\mu'=:(\a',\b')$. Since $b\in S^{l,m}_{\sg,z}$ we also have the estimate 
$$
\norm{(\del^{\mu'} b) \circ \wt \Xi (\rx,\zeta,\zeta')} \leq C'_{\mu} \langle \rx^{\zeta,\zeta'} \rangle^{\sg(l-|\a'|)} \langle \vth \rangle^{m-|\b'|}  
$$
so the result follows now from the estimate $\langle \rx^{\zeta,\zeta'} \rangle^{\sg(l-|\a'|)} =\O( \langle \rx\rangle^{\sg(l-|\a'|)} (\langle \zeta\rangle \langle \zeta'\rangle)^{\sg k_1 |l|+\sg k_1 |\a'|})$, with $\ka_{\Xi}:=\ka_{\psi}+\max\set{|w_0+\sg k_1-\ka_\psi|,|w_1+\sg k_1 -\ka_\psi|}$.
\end{proof}

\begin{lem}
\label{derivphase}
Let $s\in C^\infty(\R^p,\R^n)$. Then for any $p+n$-multi-index $\nu=(\a,\b)\neq 0$, we have 
$$
\del_{\rx,\vth}^\nu\, e^{i\langle \vth,s(\rx)\rangle } = P_{\nu}(\rx,\vth)\, e^{i\langle \vth,s(\rx)\rangle }
$$
where $P_{\nu}$ is of the form $\sum_{|\ga|\leq |\a|} \vth^\ga \, T_{\nu,\ga}(\rx)$, and $T_{\nu,\ga}$ is a linear combination of terms of the form $\prod_{j=1}^m (\del^{l^j} s)^{\mu^j}$ where $1\leq m\leq |\nu|$, $(l^j)$ are $p$-multi-indices  and $(\mu^j)$ are $n$-multi-indices. Moreover, they satisfy $|\mu^j|>0$, $\sum_{j=1}^m |\mu^j|=|\ga|+|\b|$, $\sum_{j=1}^{m} |\mu^j||l^j|=|\a|$ and if $|\b|=0$, then $|l^j|>0$ and $|\ga|>0$.
\end{lem}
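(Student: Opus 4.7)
The plan is to prove this by induction on $|\nu|$, which is essentially a structured version of the Faa di Bruno/Leibniz computation applied to the exponential of a composite phase.

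\medskip

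For the base case $|\nu|=1$, I would handle the two sub-cases separately. If $\nu=(e_i,0)$ in the $\rx$-direction, then $\del_{\rx_i} e^{i\langle\vth,s(\rx)\rangle} = i\sum_k \vth_k \,\del_{\rx_i}s_k(\rx)\,e^{i\langle \vth,s(\rx)\rangle}$, so $P_\nu(\rx,\vth)=\sum_{k=1}^n i\vth_k \,\del_{\rx_i}s_k$, fitting the template with $|\ga|=1$, $m=1$, $l^1=e_i$, $\mu^1=e_k$, and all four constraints verified. If $\nu=(0,e_j)$, then $\del_{\vth_j}e^{i\langle\vth,s(\rx)\rangle}= is_j(\rx)\,e^{i\langle\vth,s(\rx)\rangle}$, so $P_\nu=is_j$, corresponding to $\ga=0$, $m=1$, $l^1=0$, $\mu^1=e_j$; here $|\b|=1$ so the last parenthetical ``if $|\b|=0$'' constraint is vacuous.

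\medskip

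For the inductive step, I would assume the result at level $|\nu|=k$ and differentiate $P_\nu(\rx,\vth)\,e^{i\langle\vth,s(\rx)\rangle}$ once more. Differentiating in $\rx_i$ produces two kinds of contributions: $\del_{\rx_i}P_\nu \cdot e^{i\langle\vth,s(\rx)\rangle}$ acts inside each monomial $\prod_j (\del^{l^j}s)^{\mu^j}$ by Leibniz, raising some $l^j$ to $l^j+e_i$ (so $\sum|\mu^j||l^j|$ increases by $1$, matching $|\a|\to|\a|+1$); while $iP_\nu\sum_k\vth_k\,\del_{\rx_i}s_k\cdot e^{i\langle\vth,s(\rx)\rangle}$ appends an extra factor of $\vth_k \,\del_{\rx_i}s_k$, which both raises $|\ga|$ by $1$ (still $\leq|\a|+1$) and adjoins a new pair $(l^{m+1},\mu^{m+1})=(e_i,e_k)$ preserving $\sum|\mu^j|=|\ga|+|\b|$ and $\sum|\mu^j||l^j|=|\a|+1$. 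Differentiating in $\vth_j$ likewise splits into $\del_{\vth_j}P_\nu$, which lowers some power of $\vth$ (and $|\ga|\leq|\a|$ is preserved), and $iP_\nu s_j$, which appends $(l^{m+1},\mu^{m+1})=(0,e_j)$, raising $|\b|$ by $1$ and $\sum|\mu^j|$ by $1$ with no change to $\sum|\mu^j||l^j|=|\a|$.

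\medskip

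The only delicate bookkeeping is the final clause: when $|\b|=0$, all indices $l^j$ satisfy $|l^j|>0$ and $|\ga|>0$. I would maintain this as part of the induction hypothesis. It holds in the base case $\nu=(e_i,0)$; in the inductive step, as long as $|\b|$ stays zero (i.e.\ we only differentiate in $\rx$), each new factor $(l^{m+1},\mu^{m+1})=(e_i,e_k)$ has $|l^{m+1}|=1>0$ and raises $|\ga|$ by $1$, and the Leibniz action $\del_{\rx_i}$ on an existing $(\del^{l^j}s)^{\mu^j}$ with $|l^j|>0$ preserves $|l^j|>0$; meanwhile $|\ga|>0$ is maintained since $\del_{\rx_i}P_\nu$ does not modify $\ga$ and the second contribution strictly raises $|\ga|$. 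The main (but still routine) obstacle is keeping the four numerical constraints simultaneously tracked through the two branches of the induction; I would organize the bookkeeping by writing each $T_{\nu,\ga}$ as a formal sum of admissible monomials and checking that each of the two differentiation operations sends admissible monomials to admissible monomials at the next level.
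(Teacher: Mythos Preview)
Your inductive approach is correct and will yield the lemma, but it differs from the paper's proof. The paper applies the multivariate Faa di Bruno formula (Theorem~\ref{FaaCS}) in one shot to $e^{ig}$ with $g(\rx,\vth)=\langle\vth,s(\rx)\rangle$, obtaining $P_\nu=\sum_{1\le k\le|\nu|}P_{\nu,k}(g)$ as combinations of products $\prod_j(\del^{l^j}g)^{k^j}$; it then exploits that $\del^{l^j}g$ is linear in $\vth$ when $l^{j,2}=0$, equals a component $\del^{l^{j,1}}s^{q_j}$ when $|l^{j,2}|=1$, and vanishes otherwise, expanding the multinomials to read off the $\vth^\ga$ structure and the constraints. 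Your argument is more elementary (it avoids invoking Faa di Bruno) and makes the bookkeeping of the four constraints very transparent step by step; the paper's approach is quicker once Theorem~\ref{FaaCS} is on the table and explains structurally why the constraints arise.

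One small imprecision in your sketch: when $\del_{\rx_i}$ hits $\prod_j(\del^{l^j}s)^{\mu^j}$, it does not simply ``raise some $l^j$ to $l^j+e_i$'' unless $|\mu^j|=1$. In general Leibniz on $(\del^{l^j}s)^{\mu^j}=\prod_k((\del^{l^j}s)_k)^{\mu^j_k}$ replaces one scalar factor $((\del^{l^j}s)_k)$ by $((\del^{l^j+e_i}s)_k)$, producing $(\del^{l^j}s)^{\mu^j-e_k}(\del^{l^j+e_i}s)^{e_k}$, i.e.\ a split into two factors ($m\to m+1$). The constraints are still preserved: $\sum|\mu'^{j}|$ is unchanged, $\sum|\mu'^{j}||l'^{j}|$ increases by exactly $1$, and if $|\b|=0$ both new exponents $l^j$ and $l^j+e_i$ remain strictly positive. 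With this correction your induction goes through.
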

\begin{proof}
We note $g(\rx,\vth):=\langle \vth,s(\rx)\rangle$. By Theorem \ref{FaaCS}, we get the following equality for any $\nu\neq 0$, $\del_{\rx,\vth}^\nu\, e^{i\langle \vth,s(\rx)\rangle } = P_{\nu}(\rx,\vth) e^{i\langle \vth,s(\rx)\rangle}$ where $P_{\nu}(\rx,\vth)=\sum_{1\leq k\leq |\nu|} P_{\nu,k}(g)$ and $P_{\nu,k}$ is a linear combination of terms of the form $\prod_{j=1}^{m} (\del^{l^j} g)^{k^j}$ such that $|l^{j}|>0$, $k^j>0$, $\sum_{1}^m k^{j}= k$ and $\sum_1^m k^j l^j = \nu$. If we suppose that the term $\prod_{j=1}^{m} (\del^{l^j} g)^{k^j}$ is non-zero, then $|l^j|\leq 1$ and if we define $j_1$ such that for any $1\leq j \leq j_1$, $l^{j,2}=0$, we obtain, noting $l^{j}=(l^{j,1},l^{j,2})$,
\begin{align*}
\prod_{j=1}^{m} (\del^{l^j} g)^{k^j}&= \prod_{j=1}^{j_1} \langle \vth, \del^{l^{j,1}} s\rangle^{k^j} \prod_{j=j_1+1}^m (\del^{l^{j,1}} s^{q_j})^{k^j} \\
&= \sum_{|\ga^j|=k^j,\,1\leq j\leq j_1} \ga^1!\cdots \ga^j!\ \vth^{\sum_1^{j_1} \ga^j} \prod_{j=1}^{j_1} (\del^{l^{j,1}}s)^{\ga^j} \prod_{j=j_1+1}^m (\del^{l^{j,1}} s^{q_j})^{k^j} \, .
\end{align*}
Thus, we have $P_{\nu,k}=\sum_{|\ga|=k-|\b|} \vth^\ga \, T_{\nu,\ga,k}(\rx)$ where $T_{\nu,\ga,k}$ is a linear combination of terms of the form $\prod_{j=1}^{j_1} (\del^{l^{j,1}} s)^{\mu^{j}} \prod_{j=j_1+1}^m (\del^{l^{j,1} } s^{q_j})^{k^j}$, where $1\leq q_j\leq n$, $1\leq j\leq m \leq |\nu|$, $1\leq j_1\leq m$, $l^{j,1}\in \N^p$, $k^j\in \N^*$, $\la^j\in \N^n$ are such that $\sum_{1}^m k^j=k$, $\sum_{1}^{j_1} |\la^j||l^{j,1}| + \sum_{j_1
+1}^{m} k^j |l^{j,1}+1|=|\nu|$ and $\sum_{j_1+1}^m k^j = |\b|$. The result follows.
\end{proof}

\begin{lem}
\label{phasecontrol}
Suppose that $(C_\sg)$ is satisfied. Then

\noindent (i) Representing by $\ru$ the letter $s$ or $\varphi$, for any $3n$-multi-index $\nu=(\mu,\ga)\in \N^{2n}\times \N^n$, we have the equality $\del^\nu_{\rx,\zeta,\vth} e^{2\pi  i \langle \vth,\ru_{\rx,\zeta}(\zeta')\rangle} = (\sum_{|\om|\leq |\mu|} \vth^\om T_{\nu,\om,\ru}(\rx,\zeta,\zeta'))\, e^{2\pi  i \langle \vth,\ru_{\rx,\zeta}(\zeta')\rangle}$ where each term $T_{\nu,\om,s}\in \O_{\sg,\ka_v,\eps_v,\eps_v,|\om+\ga|}^{-|\mu|,\ka_v|\mu|,w_s|\om+\ga|+\ka_v|\mu|}(\R)$ and $T_{\nu,\om,\varphi}\in \O_{\sg,\ka_v,\eps_v,\eps_v,2|\om+\ga|}^{-|\mu|-\eps_v|\om+\ga|,\eps_v|\om+\ga|+\ka_v|\mu|,w_\varphi|\om+\ga|+\ka_v|\mu|}(\R)$. In particular, it satisfies the following estimate valid for any $(\rx,\zeta,\zeta')\in \R^{3n}$, and any $n$-multi-index $\rho$,
\begin{align*}
&|\del^\rho_{\zeta'} T_{\nu,\om,s}(\rx,\zeta,\zeta') |\leq C_{\nu,\om,\rho}\langle \rx\rangle^{-\sg(|\mu|+\eps_v |\rho|_{|\om+\ga|})}\langle \zeta \rangle^{\ka_v|\mu|+\eps_v |\rho|} \langle \zeta'\rangle^{w_s|\om+\ga|+\ka_v(|\mu|+|\rho|)}\, , \\
& |\del^\rho_{\zeta'} T_{\nu,\om,\varphi }(\rx,\zeta,\zeta') |\leq C_{\nu,\om,\rho}\langle \rx\rangle^{-\sg(|\mu|+(\eps_v/2)|\rho|)}\langle \zeta \rangle^{\eps_v|\om+\ga|+\ka_v|\mu|+\eps_v|\rho|} \langle \zeta'\rangle^{w_\varphi|\om+\ga|+\ka_v(|\mu|+|\rho|)}\, .
\end{align*}

\noindent (ii) For any $n$-multi-index $\b$, we have $\del^{\b}_{\zeta'} e^{2\pi i \langle \vth, \varphi_{\rx,\zeta}(\zeta')\rangle} = P_{\b,\varphi}(\rx,\zeta,\zeta',\vth) e^{2\pi i \langle \vth, \varphi_{\rx,\zeta}(\zeta')\rangle}$ where $P_{\b,\varphi}(\rx,\zeta,\zeta',\vth)$ is a linear combination of terms of the form $\vth^\om \zeta'^\la t_{\om,\la}(\rx,\zeta,\zeta')$ where $\om$ and $\la$ are $n$-multi-indices satifying $|\om|\leq |\b|$, $(2|\om|-|\b|)_+\leq |\la|\leq |\om|$, and $t_{\om,\la}$ are functions in $\O_{\sg,\ka_v,\eps_v,\eps_v,|\b|}^{-\eps_v|\b|/2,2\eps_v,w'_s|\b|}(\R)$. In particular they are estimated by 
$$
t_{\om,\la}(\rx,\zeta,\zeta') = \O(\langle \rx\rangle^{-\sg\eps_v|\b|/2} \langle \zeta\rangle^{2\eps_v|\b|} \langle \zeta'\rangle^{w'_s|\b|})
$$
where  $w'_s:=w_s+2\ka_v$.
Moreover, $(\rx,\zeta,\vth)\mapsto P_{\b,\varphi}(\rx,\zeta,0,\vth)\, 1_{L(E_z)}\in  \Pi_{\sg,\ka_v,z}^{-\eps_v|\b|/2,\eps_v|\b|,|\b|/2}$.

\noindent (iii) If $\b\in \N^n$ and $f\in \wt \Pi_{\sg,\ka,\eps_1,z}^{l,w_0,w_1,m}$ then the function 
$$
f_{\b,\varphi}: (\rx,\zeta,\vth)\mapsto \del^{\b}_{\zeta'}\big(e^{2\pi i \langle \vth,\varphi_{\rx,\zeta}(\zeta')\rangle} \del^{0,0,0,\b}f(\rx,\zeta,\zeta',L_{\rx,\zeta}(\vth))\big)_{\zeta'=0}$$
belongs to $\Pi_{\sg,\ka_1,z}^{l-\eps'_1 |\b|,w_0+\ka_2|\b|,m-|\b|/2}$, where $\eps'_1:=\min\set {\eps_1/2,\eps_v/2}>0$, $\ka_1:=\max\set{\ka_v,\ka}$, $\ka_2:=\ka+|\eps_v-\ka|$, and the 
application $f\mapsto f_{\b,\varphi}$ is continuous.

\end{lem}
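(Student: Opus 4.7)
The plan is to attack the three parts sequentially, reducing each to the Faa di Bruno structure of Lemma \ref{derivphase} combined with the graded-algebra structure of the $\O_{\sg,\ka,\eps_0,\eps_1,c}^{l,w_0,w_1}$ spaces under the estimates of Lemma \ref{Csgr}.

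For part (i), I would apply Lemma \ref{derivphase} with $p=2n$ and the smooth map ``$s$'' of the lemma taken to be $(\rx,\zeta)\mapsto u_{\rx,\zeta}(\zeta')$ for a fixed parameter $\zeta'$, where $u$ stands for either $s$ or $\varphi$. The resulting expansion immediately has the stated shape $\sum_{|\om|\le|\mu|}\vth^\om T_{\nu,\om,u}$, with each $T_{\nu,\om,u}$ a linear combination of products $\prod_j(\del^{l^j}_{\rx,\zeta}u_{\rx,\zeta}(\zeta'))^{\mu^j}$ subject to the combinatorial identities $\sum|\mu^j|=|\om|$ and $\sum|\mu^j||l^j|=|\mu|$. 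Feeding the membership $s\in \O_{\sg,\ka_v,\eps_v,\eps_v,1}^{0,0,w_s}$ and $\varphi\in \O_{\sg,\ka_v,\eps_v,\eps_v,2}^{-\eps_v,\eps_v,w_\varphi}$ from Lemma \ref{Csgr} (i) into these products, the graded-algebra structure of the $\O$-spaces directly yields the claimed membership of $T_{\nu,\om,s}$ and $T_{\nu,\om,\varphi}$. The $\del^\rho_{\zeta'}$ estimates follow automatically from the definition of $\O_{\sg,\ka,\eps_0,\eps_1,c}^{l,w_0,w_1}$ under $\zeta'$-differentiation.

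For part (ii), I would apply Lemma \ref{derivphase} again but purely in the $\zeta'$-variable (i.e.\ with $|\b|_{\text{lem}}=0$, $|\a|_{\text{lem}}=|\b|$), which produces $\del^\b_{\zeta'}e^{2\pi i\langle\vth,\varphi\rangle}=\sum_{|\om|\le|\b|}\vth^\om T_\om\,e^{2\pi i\langle\vth,\varphi\rangle}$ with $T_\om$ a sum of products $\prod_j(\del^{l^j}_{\zeta'}\varphi)^{k^j}$ satisfying $\sum k^j=|\om|$, $\sum k^j|l^j|=|\b|$, $|l^j|\ge1$. Next I would substitute the explicit Taylor representation $\varphi_{\rx,\zeta}(\zeta')=\sum_{|\b_1|=2}\tfrac{2}{\b_1!}\zeta'^{\b_1}\tilde r_{\b_1}(\rx,\zeta,\zeta')$ (as in the proof of Lemma \ref{Csgr} (i)) and expand each $\del^{l^j}_{\zeta'}\varphi$ by Leibniz. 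Each individual copy $\del^{l^j}_{\zeta'}\varphi$ then splits as a sum of explicit monomials of degree $\max(2-|l^j_1|,0)$ in $\zeta'$ multiplied by elements of $\O_{\sg,\ka_v,\eps_v,\eps_v,0}^{-\eps_v,\eps_v,w_v+\ka_v}$. Collecting the explicit $\zeta'$-powers across the product gives a raw total $|\la'|\in[(2|\om|-|\b|)_+,2|\om|]$; the lower bound $|\la|\ge(2|\om|-|\b|)_+$ in the statement is immediate, and to enforce $|\la|\le|\om|$ I split any monomial of explicit degree $|\la'|>|\om|$ as $\zeta'^\la\cdot\zeta'^{\la'-\la}$ with $|\la|=|\om|$, absorbing the excess $\zeta'$-factor into $t_{\om,\la}$. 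The $\O_{\sg,\ka_v,\eps_v,\eps_v,|\b|}^{-\eps_v|\b|/2,2\eps_v,w'_s|\b|}$ membership of $t_{\om,\la}$ is then read off by tracking, for each Faa di Bruno term, how many factors contributed the minimal explicit degree (which is what controls the $-\eps_v|\b|/2$ gain in $\langle\rx\rangle$, using $|\om|\le|\b|$ and summing the $-\eps_v$ contributions). The last assertion about $P_{\b,\varphi}(\rx,\zeta,0,\vth)$ follows by retaining only $\la=0$ terms, which forces $|\om|\le|\b|/2$, and invoking Lemma \ref{HEamp} (i) to move from $\O$- to $\Pi$-spaces.

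For part (iii), I would expand $\del^\b_{\zeta'}$ by the Leibniz formula applied to the product $e^{2\pi i\langle\vth,\varphi_{\rx,\zeta}(\zeta')\rangle}\cdot \del^{0,0,0,\b}f(\rx,\zeta,\zeta',L_{\rx,\zeta}(\vth))$, and then evaluate at $\zeta'=0$. From part (ii), only $\la=0$ terms of $\del^{\b_1}_{\zeta'}(e^{2\pi i\langle\vth,\varphi\rangle})|_{\zeta'=0}$ survive, enforcing $|\om|\le|\b_1|/2\le|\b|/2$, and contributing a factor in $\Pi^{-\eps_v|\b_1|/2,\eps_v|\b_1|,|\b_1|/2}_{\sg,\ka_v,z}$. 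For the second factor, differentiating $f\in\wt\Pi_{\sg,\ka,\eps_1,z}^{l,w_0,w_1,m}$ by $\del^{\b_2}_{\zeta'}$ stays within the same $\wt\Pi$-class up to an $l$-loss of $\eps_1|\b_2|$ and a $w_0$/$w_1$-loss of $\ka|\b_2|$, and the composition $\vth\mapsto L_{\rx,\zeta}(\vth)$ with $L\in E_{\sg,\ka_v}^0$ preserves the amplitude class by Lemma \ref{HEamp} (iv); setting $\zeta'=0$ then passes to $\Pi_{\sg,\ka,z}^{l,w_0,m-|\b|}$. Taking the pointwise product and summing over $\b_1+\b_2=\b$ yields membership in $\Pi_{\sg,\ka_1,z}^{l-\eps'_1|\b|,w_0+\ka_2|\b|,m-|\b|/2}$ with $\eps'_1=\min(\eps_1/2,\eps_v/2)$ (the worst case being $\eps_1|\b_2|/2+\eps_v|\b_1|/2\ge\eps'_1|\b|$) and $\ka_2=\ka+|\eps_v-\ka|$, while continuity of $f\mapsto f_{\b,\varphi}$ reduces to continuity of each step (Leibniz, composition with $L$, evaluation at $\zeta'=0$, multiplication of amplitudes). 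The main obstacle I anticipate is the combinatorial bookkeeping in part (ii), namely the explicit ``truncation'' argument showing that any excess explicit $\zeta'$-degree beyond $|\om|$ can be absorbed into $t_{\om,\la}$ while preserving its membership in the claimed $\O$-class with the sharp $-\eps_v|\b|/2$ gain in $\langle\rx\rangle$.
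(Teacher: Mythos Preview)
Your plan for parts (i) and (iii) matches the paper's proof essentially verbatim, apart from one slip in (i): when you invoke Lemma \ref{derivphase} with the base variable $(\rx,\zeta)$ and the phase variable $\vth$, the abstract ``$\b$'' of that lemma is your $\ga$, so the constraint reads $\sum_j|\mu^j|=|\om|+|\ga|=|\om+\ga|$, not $|\om|$. This is precisely what generates the $|\om+\ga|$-dependence in the target $\O$-classes, so without it you will not match the claimed exponents.

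For part (ii) your route and the paper's diverge. You propose to insert the second-order Taylor expansion of $\varphi$ itself, $\varphi_{\rx,\zeta}(\zeta')=\sum_{|\b_1|=2}\tfrac{2}{\b_1!}\zeta'^{\b_1}\tilde r_{\b_1}$, into each factor $\del^{l^j}_{\zeta'}\varphi$ and expand by Leibniz; this produces raw explicit $\zeta'$-degree $|\la'|\in[(2|\om|-|\b|)_+,2|\om|]$, and you then truncate down to $|\la|\le|\om|$ by absorbing the excess. The paper instead splits the Faa di Bruno factors according to whether $|l^j|=1$ or $|l^j|\ge 2$: for $|l^j|=1$ it uses the \emph{first}-order Taylor of $\del^{l^j}\varphi_{\rx,\zeta}$ at $\zeta'=0$ (which vanishes there since $(d\varphi_{\rx,\zeta})_0=0$), extracting exactly one power of $\zeta'$ per such factor; for $|l^j|\ge 2$ it keeps $\del^{l^j}\varphi$ intact. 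This directly yields $|\la|=\sum_{|l^j|=1}|\mu^j|\le|\om|$ with no truncation, and the $\langle\rx\rangle^{-\sg\eps_v|\b|/2}$ gain follows from the inequality $2v\le|\b|-|\la|$ with $v:=|\om|-|\la|$. Your approach can be made to work---the key gain $|\om|+|\b|-B\ge|\b|/2$ does hold since $B:=\sum_j|\mu^j|b_j\le\min(2|\om|,|\b|)$---but the truncation step requires additional care to verify that absorbing the monomial $\zeta'^{\la'-\la}$ into $t_{\om,\la}$ preserves membership in the full $\O_{\sg,\ka_v,\eps_v,\eps_v,|\b|}$-class (in particular the $c$-parameter and the behaviour under further $\zeta'$-differentiation), bookkeeping that the paper's split bypasses.
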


\begin{proof} $(i)$ By Lemma \ref{derivphase}, if $\nu\neq 0$, we have the following equality, valid for any $(\rx,\zeta,\zeta',\vth)\in {\R^{4n}}$, $\del^\nu_{\rx,\zeta,\vth} e^{2\pi  i \langle \vth,\ru_{\rx,\zeta}(\zeta')\rangle} = (\sum_{|\om|\leq |\mu|} \vth^\om T_{\nu,\om,\ru}(\rx,\zeta,\zeta'))\, e^{2\pi  i \langle \vth,\ru_{\rx,\zeta}(\zeta')\rangle}$ where $T_{\nu,\om,\ru}$
is a linear combination of terms of the form $\prod_{j=1}^m( \del_{\rx,\zeta}^{l^j} \ru )^{\mu^j}$ with $1\leq m \leq |\nu|$, $\mu^j\neq 0$, $\sum_{j=1}^m |\mu^j|= |\om+\ga|$ and $\sum_{j=1}^{m} |\mu^j||l^j|=|\mu|$. Since by Lemma \ref{Csgr} $(i)$, $s\in \O_{\sg,\ka_v,\eps_v,\eps_v,1}^{0,0,w_s}(\R^n)$, it is straightforward to check that $T_{\nu,\om,s}\in \O_{\sg,\ka_v,\eps_v,\eps_v,|\om+\ga|}^{-|\mu|,\ka_v|\mu|,w_s|\om+\ga|+\ka_v|\mu|}(\R)$. Moreover, since $\varphi\in \O_{\sg,\ka_v,\eps_v,\eps_v,2}^{-\eps_v,\eps_v,w_\varphi}(\R^n)$, we get $T_{\nu,\om,\varphi}\in \O_{\sg,\ka_v,\eps_v,\eps_v,2|\om+\ga|}^{-|\mu|-\eps_v|\om+\ga|,\eps_v|\om+\ga|+\ka_v|\mu|,w_\varphi|\om+\ga|+\ka_v|\mu|}(\R)$. The first estimate is direct and the second estimate follows from the inequality $|\om+\ga|+|\rho|_{2|\om+\ga|} \geq |\rho|/2$.

\noindent $(ii)$  By Lemma \ref{derivphase}, if $\b\neq 0$, we have for any $(\rx,\zeta,\zeta',\vth)\in {\R^{4n}}$, the following relation $\del^\b_{\zeta'} e^{2\pi  i \langle \vth,\varphi_{\rx,\zeta}(\zeta')\rangle} = (\sum_{1\leq|\om|\leq |\b|} \vth^\om T_{\b,\om,\varphi}(\rx,\zeta,\zeta')) e^{2\pi  i \langle \vth,\varphi_{\rx,\zeta}(\zeta')\rangle}$ where $T_{\b,\om,\varphi}$
is a linear combination of terms of the form $\prod_{j=1}^m( \del^{l^j} \varphi_{\rx,\zeta} )^{\mu^j}$ with $1\leq m \leq |\b|$, $\mu^j\neq 0$, $l^j\neq 0$, $\sum_{j=1}^m |\mu^j|= |\om|$ and $\sum_{j=1}^{m} |\mu^j||l^j|=|\b|$. Let us reorder the $l^j$ indices so that for any $1\leq j\leq j_1$, $|l^j|=1$ and for any  $j\geq j_1+1$, $|l^j|>1$, where $j_1 \in \set{0,\cdots m}$. Thus $\prod_{j=1}^m( \del^{l^j} \varphi_{\rx,\zeta} )^{\mu^j} = \prod_{j=1}^{j_1}( \del^{l^j} \varphi_{\rx,\zeta} )^{\mu^j} \prod_{j\geq j_1+1} ( \del^{l^j} \varphi_{\rx,\zeta} )^{\mu^j}$ and with a Taylor expansion at order 1 of $\del^{l^j}\varphi_{\rx,\zeta}$ in $\zeta'$ around 0 when $1\leq j\leq j_1$, we get $\del^{l^j}\varphi_{\rx,\zeta} = \sum_{1\leq i\leq n} \zeta'_i t_{i,j}^k$ where $t_{i,j}^k = \int_{0}^1 \del_{\zeta'}^{e_i+l^j} \varphi_{\rx,\zeta}(t\zeta')dt$. Thus, using the fact that $\varphi\in \O_{\sg,\ka_v,\eps_v,\eps_v,1}^{0,0,w_s}(\R^n)$, we see that $\prod_{j=1}^{j_1} (\del^{l^j}\varphi_{\rx,\zeta} )^{\mu^j}$ is a linear combination of terms of the form $\zeta'^\la V_\la$ where $|\la|=\sum_{j=1}^{j_1}|\mu^j|$ 
and 
$$
V_\la =\O( \langle \rx \rangle^{-\sg \eps_v \sum_1^{j_1}|l^j||\mu^j|} \langle \zeta\rangle^{\eps_v|\la| + \eps_v\sum_1^{j_1}|\mu^j||l^j|} \langle \zeta'\rangle^{(k_v +w_s)|\la|+ \ka_v\sum_{1}^{j_1}|l^j||\mu^j|}).
$$
As a consequence, we see that $\prod_{j=1}^{m} (\del^{l^j}\varphi_{\rx,\zeta} )^{\mu^j}$ is a linear combination of terms of the form $\zeta'^\la W_\la$ where $|\la|=\sum_{j=1}^{j_1}|\mu^j|$ and
$$
W_\la =\O( \langle \rx \rangle^{-\sg \eps_v (|\b|-v)} \langle \zeta\rangle^{2\eps_v|\b|} \langle \zeta'\rangle^{w'_s|\b|})
$$
where $v:=\sum_{j=j_1+1}^{m} |\mu^j|=|\om|-|\la|$. The first statement now follows from the inequality $2v\leq |\b|-|\la|$.

Since $\varphi_{\rx,\zeta}(0)=0$ and $(d\varphi_{\rx,\zeta})_0=0$, $P_{\b,\varphi}(\rx,\zeta,0,\vth)$ is a linear combination of terms of the form $\vth^\om \prod_{j=1}^m (\del^{0,0,l^j} \varphi(\rx,\zeta,0))^{\mu^j}$ with  
$1\leq |\om|\leq |\b|/2$, $1\leq m \leq |\b|$, $\mu^j\neq 0$, $|l^j|\geq 2$, $\sum_{j=1}^m |\mu^j|= |\om|$ and $\sum_{j=1}^{m} |\mu^j||l^j|=|\b|$.
We check with Lemma \ref{Csgr} $(i)$ that any function of the form $\prod_{j=1}^m (\del^{0,0,l^j} \varphi(\rx,\zeta,\zeta'))^{\mu^j}$ is in $\O_{\sg,\ka_v,\eps_v,|\b|/2}^{-\eps_v|\b|/2,\eps_v|\b|,(w_s/2+\ka_v)|\b|}(\R)$, and thus, $(\rx,\zeta,\vth)\mapsto \prod_{j=1}^m (\del^{0,0,l^j} \varphi(\rx,\zeta,0))^{\mu^j} \,1_{L(E_z)} \in \Pi_{\sg,\ka_v,z}^{-\eps_v |\b|/2,\eps_v|\b|,0}$. Since $(\rx,\zeta,\vth)\mapsto \vth^\om\, 1_{L(E_z)}\in \Pi_{\sg,\ka_v,z}^{0,0,|\b|/2}$ we obtain $(\rx,\zeta,\vth)\mapsto P_{\b,\varphi}(\rx,\zeta,0,\vth)\,1_{L(E_z)}\in \Pi_{\sg,\ka_v,z}^{-\eps|\b|/2,\eps_v|\b|,|\b|/2}$.

\noindent $(iii)$ We have
\begin{align*}
f_{\b,\varphi} (\rx,\zeta,\vth)&= \sum_{\b'\leq \b}\tbinom{\b}{\b'}\del^{\b'}_{\zeta'}(e^{2\pi i \langle \vth,\varphi_{\rx,\zeta}(\zeta')\rangle})_{\zeta'=0}\, \del^{0,0,\b-\b',\b}f(\rx,\zeta,0,L_{\rx,\zeta}(\vth))\\
&=\sum_{\b'\leq \b}\tbinom{\b}{\b'}P_{\b',\varphi}(\rx,\zeta,0,\vth)\, \del^{0,0,\b-\b',\b}f(\rx,\zeta,0,L_{\rx,\zeta}(\vth))\, .
\end{align*}
Since $(\rx,\zeta)\mapsto L_{\rx,\zeta}\in E^0_{\sg,\ka_v}(\M_n(\R))$ and $L_{\rx,\zeta}^{-1}=\O(1)$, we deduce from Lemma \ref{amptilde} $(i)$ and Lemma \ref{HEamp} $(iv)$ that $(\rx,\zeta,\vth)\mapsto \del^{0,0,\b-\b',\b}f(\rx,\zeta,0,L_{\rx,\zeta}(\vth))$ belongs to the amplitude space $\Pi_{\sg,\max\set{\ka,\ka_v},z}^{l-\eps_1|\b-\b'|,w_0+\ka|\b-\b'|,m-|\b|}$. The result now follows from $(ii)$.
\end{proof}

We now introduce two parametrized cut-off functions that will be used later. Let $b \in C^{\infty}_c(\R,[0,1])$ such that $b=1$ on $[-1/4,1/4]$ and $b=0$ on $\R\backslash]-1,1[$. We define for $\eps,\delta,\eta_1,\eta_2 >0$ with $\eps,\delta<1,$
\begin{align*}
&\chi_{\eps}(\vth,\vth'):=b(\tfrac{\norm{\vth'}^2}{\eps^2\langle \vth\rangle^{2}})\, ,\\
&\chi_{\delta,\eta}(\rx,\zeta,\zeta') := b(\tfrac{\norm{\zeta'}^2}{\delta^2\langle \rx\rangle^{2\sg \eta_1} \langle \zeta\rangle^{-2\eta_2}}).
\end{align*}

\begin{lem}
\label{lemchi} The cut-off functions $\chi_{\eps}$ and $\chi_{\delta,\eta}$ are repectively in the spaces $C^{\infty}(\R^{2n},[0,1])$ and $C^{\infty}(\R^{3n},[0,1])$ and satisfy: 

\noindent $(i)$ For any $(\rx,\zeta,\zeta')\in \R^{3n}$, if $\norm{\zeta'}\leq \half \delta \langle \rx\rangle^{\sg\eta_1}\langle \zeta\rangle^{-\eta_2}$, then $\chi_{\delta,\eta}(\rx,\zeta,\zeta')=1$, and if $\norm{\zeta'}\geq \delta\langle  \rx\rangle^{\sg\eta_1}\langle \zeta\rangle^{-\eta_2}$, then $\chi_{\delta,\eta}(\rx,\zeta,\zeta')=0$. In particular, for any $(\rx,\zeta)\in \R^{2n}$, $\chi_{\delta,\eta}(\rx,\zeta,0)=1$ and for any $3n$-multi-index $\nu\neq 0$, $(\del^\nu \chi_{\delta,\eta} )(\rx,\zeta,0)=0$.

\noindent $(ii)$ For any $(\vth,\vth')\in \R^{2n}$, if $\norm{\vth'}\leq \half \eps \langle \vth\rangle$, then $\chi_\eps(\vth,\vth')=1$, and if $\norm{\vth'}\geq \eps\langle \vth\rangle$, then $\chi_\eps(\vth,\vth')=0$. In particular, for any $\vth\in \R^n$, $\chi_\eps(\vth,0)=1$ and for any $2n$-multi-index $\nu\neq 0$, $(\del^\nu \chi_\eps)(\vth,0)=0$.

\noindent $(iii)$ For any $3n$-muti-index $\nu=(\a,\b,\ga)$, we have $\del^{\nu} \chi_{\delta,\eta}(\rx,\zeta,\zeta')=\O(\langle \rx\rangle^{-|\a|} \langle \zeta\rangle^{-\b}\langle \zeta'\rangle^{-|\ga|})$, and $\del^{\nu} \chi_{\delta,\eta}(\rx,\zeta,\zeta')=\O(\langle \rx\rangle^{-\sg|\nu|} \langle \zeta\rangle^{(-1+\eta_2/\eta_1)|\b|+(\eta_2/\eta_1)|\ga|}\langle \zeta'\rangle^{(\eta_1^{-1}-1)|\ga|+\eta_1^{-1}|\b|})$. In particular, the function $\chi_{\delta,\eta}$ is in $\O_{\sg,\ka'_\eta,\ka'_\eta,1,0}^{0,0,0}(\R)$ for a $\ka'_\eta>0$.

\noindent $(iv)$ For any $2n$-muti-index $\nu$, $\del^{\nu} \chi_\eps(\vth,\vth') =\O(\langle \vth\rangle^{-|\nu|} )$ and $\del^{\nu} \chi_\eps(\vth,\vth')=\O(\langle \vth'\rangle^{-|\nu|}).$

\end{lem}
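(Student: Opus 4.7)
The plan is to fix notation by writing $\chi_{\delta,\eta} = b\circ F$ with $F(\rx,\zeta,\zeta'):=\rho(\rx,\zeta)\norm{\zeta'}^2$ and $\rho(\rx,\zeta):=\delta^{-2}\langle\rx\rangle^{-2\sg\eta_1}\langle\zeta\rangle^{2\eta_2}$, and similarly $\chi_\eps = b\circ G$ with $G(\vth,\vth'):=\eps^{-2}\langle\vth\rangle^{-2}\norm{\vth'}^2$. Then smoothness and the $[0,1]$-range are immediate from $b\in C_c^\infty(\R,[0,1])$ and the smoothness of $\rho,G$.

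For (i) and (ii) I would simply unpack the definition of $b$: when $\norm{\zeta'}\leq \tfrac12\delta\langle\rx\rangle^{\sg\eta_1}\langle\zeta\rangle^{-\eta_2}$ we have $F\leq 1/4$ hence $b(F)=1$; when $\norm{\zeta'}\geq \delta\langle\rx\rangle^{\sg\eta_1}\langle\zeta\rangle^{-\eta_2}$ we have $F\geq 1$ hence $b(F)=0$. In particular $\chi_{\delta,\eta}\equiv 1$ on an open neighbourhood of every point $(\rx_0,\zeta_0,0)$ (by continuity of $\rho$ and openness of the first region), so all derivatives of total order $\geq 1$ vanish at $\zeta'=0$. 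The argument for (ii) is identical with $\vth,\vth'$ replacing $\rx,\zeta'$ and $\rho$ replaced by $\eps^{-2}\langle\vth\rangle^{-2}$.

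For (iii), the main work is Faa di Bruno applied to $b\circ F$: for $\nu=(\a,\b,\ga)\neq 0$,
\[
\del^\nu\chi_{\delta,\eta}=\sum_{1\leq k\leq |\nu|} b^{(k)}(F)\, P_{\nu,k}(F),
\]
where $P_{\nu,k}(F)$ is a polynomial in the partial derivatives $\del^{l^j}F$. The crucial structural remark is that $F$ factors as $\rho\cdot\norm{\zeta'}^2$, so $\del_{\zeta'}^{\ga'}F\equiv 0$ for $|\ga'|\geq 3$ and for $|\ga'|\leq 2$ the $\zeta'$-derivative produces a polynomial in $\zeta'$ of degree $2-|\ga'|$. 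Combined with $\del_{\rx,\zeta}^{(\a',\b')}\rho=\O(\langle\rx\rangle^{-2\sg\eta_1-|\a'|}\langle\zeta\rangle^{2\eta_2-|\b'|})$ (elementary) this gives crude control on each $\del^{l^j}F$. On the support of $b^{(k)}(F)$, $k\geq 1$, we have $F\leq 1$, i.e.\ $\norm{\zeta'}\leq \delta\langle\rx\rangle^{\sg\eta_1}\langle\zeta\rangle^{-\eta_2}$, a trapping inequality that can be used in two directions: either absorb $\norm{\zeta'}$ into $\langle\rx\rangle^{\sg\eta_1}\langle\zeta\rangle^{-\eta_2}$ to produce the second estimate, or absorb negative powers of $\langle\rx\rangle,\langle\zeta\rangle$ into $\norm{\zeta'}^{-1}$ to produce the first; each combinatorial step in Faa di Bruno yields exactly the claimed exponents once one keeps track of how many $\zeta'$-derivatives have been spent. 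The membership $\chi_{\delta,\eta}\in \O^{0,0,0}_{\sg,\ka'_\eta,\ka'_\eta,1,0}(\R)$ then follows by comparing the second estimate to the definition: one needs $\ka'_\eta\geq \max\{1,\eta_2/\eta_1,\eta_1^{-1}\}$ so that $(-1+\eta_2/\eta_1)|\b|+(\eta_2/\eta_1)|\ga|\leq \ka'_\eta|\mu|+|\ga|$ and $(\eta_1^{-1}-1)|\ga|+\eta_1^{-1}|\b|\leq \ka'_\eta|\nu|$.

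Part (iv) is a strict simplification of (iii): the same Faa di Bruno recipe applied to $b\circ G$, with the support relation $\norm{\vth'}\leq \eps\langle\vth\rangle$ used to trade one variable for the other, yields the two estimates directly. The main obstacle throughout is purely bookkeeping in Faa di Bruno: keeping track of how support constraints interact with the distribution of derivatives among $(\rx,\zeta,\zeta')$ (respectively $(\vth,\vth')$), and verifying that the exponents match those in the definition of $\O_{\sg,\ka,\eps_0,\eps_1,c}^{l,w_0,w_1}$, rather than any analytic difficulty.
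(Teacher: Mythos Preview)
Your approach is essentially the same as the paper's: Fa\`a di Bruno applied to $b\circ g$ gives the explicit bound $P_{\nu,\nu'}(g)=\O(\langle\rx\rangle^{-2\sg\eta_1\nu'-|\a|}\langle\zeta\rangle^{2\eta_2\nu'-|\b|}\langle\zeta'\rangle^{2\nu'-|\ga|})$, and the two-sided support constraint $1/4\leq g\leq 1$ (which the paper packages as an annular set $D_\delta$ on which $\langle\zeta'\rangle\langle\rx\rangle^{-\sg\eta_1}\langle\zeta\rangle^{\eta_2}$ is comparable to a constant) is then used to convert between the variables, with (iv) handled identically. One small point: to ``absorb negative powers of $\langle\rx\rangle,\langle\zeta\rangle$ into $\norm{\zeta'}^{-1}$'' you need the lower bound $F\geq 1/4$ as well as $F\leq 1$, so make both halves of the support constraint explicit.
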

\begin{proof}  $(i)$ and $(ii)$ are straightforward. For any $\nu\neq 0$, 
$\del^\nu \chi_{\delta,\eta} = \sum_{1\leq \nu'\leq |\nu|} P_{\nu,\nu'}(g)\, (\del^{\nu'}b)\,\circ\, g$ where $g(\rx,\zeta,\zeta'):= \tfrac{\norm{\zeta}^2}{\delta^2\langle \rx\rangle^{2\sg\eta_1}\langle \zeta\rangle^{-2\eta_2}}$. We obtain from a direct computation the estimate $P_{\nu,\nu'}(g)=\O(\langle \rx\rangle^{-2\sg\eta_1\nu'-|\a|}\langle \zeta\rangle^{2\eta_2\nu'-|\b|} \langle \zeta'\rangle^{2\nu'-|\ga|})$.
Since for any $\nu\in \N$, we have $\del^{\nu'} b=\O(1)$ we obtain $\del^{\nu} \chi_{\delta,\eta}=\O(\langle \rx\rangle^{-|\a|} \langle \zeta\rangle^{-\b}\langle \zeta'\rangle^{-|\ga|}1_{D_\delta})$ where $D_\delta$ is the set of triples $(\rx,\zeta,\zeta')$ satifying the inequalities $\delta/2\leq \langle  \zeta'\rangle \langle \rx\rangle^{-\sg\eta_1}\langle \zeta\rangle^{\eta_2}\leq \sqrt{2}$. The estimates of $(iii)$ follow. The proof of $(iv)$ is similar.
\end{proof}

We will use in the following lemma the space $\O_{\ka}^{t_0,t_1,j}$ ($\ka\geq 0$, $j\in \N$, $(t_0,t_1)\in \R_+^2$) of functions $f\in C^{\infty}(\R^{4n},\C)$ such that for any $\a\in \N^n$,
there is $C_\a>0$ such that for any $(\rx,\zeta,\zeta',\vth)\in \R^{4n}$,  $|\del^\a_{\zeta'} f (\rx,\zeta,\zeta',\vth)| \leq C_\a \langle \zeta\rangle^{t_0+\ka|\a|} \langle \zeta' \rangle^{t_1+\ka|\a|} \langle \vth\rangle^{-2j}$. Clearly, $\O_{\ka}^{t_0,t_1,j}\O_{\ka}^{t'_0,t'_1,j'}\subseteq \O_{\ka}^{t_0+t'_0,t_1+t'_1,j+j'}$ and $\del_{\zeta'}^\a \O_{\ka}^{t_0,t_1,j}\subseteq \O_\ka^{t_0+\ka|\a|,t_1+\ka|\a|,j}$. 

\begin{lem}
\label{lem-hL} .
Defining $h(\rx,\zeta,\zeta',\vth):= \big(1+\norm{^t(ds_{\rx,\zeta})_{\zeta'}(\vth)}^2-(i/2\pi)\langle \vth,(\Delta s_{\rx,\zeta})(\zeta')\rangle\big)^{-1}$, we have the following relation, valid for any $(\rx,\zeta,\zeta',\vth)\in \R^{4n}$, $p\in \N$,
\begin{align*}
e^{2\pi i \langle \vth,s_{\rx,\zeta}(\zeta')\rangle } = (h(\rx,\zeta,\zeta',\vth)\, L_{\zeta'})^p e^{2\pi i \langle \vth,s_{\rx,\zeta}(\zeta')\rangle}\, 
\end{align*}
where $L_{\zeta'}:=1-(2\pi)^{-2}\Delta_{\zeta'}$. Moreover, if $(C_\sg)$ holds, there is $\ka_L \geq 0$ such that for any $p\in \N$, there is $N_p\in \N^*$, $(h^p_k)_{1\leq k\leq N_p}$ functions in $\O_{\ka_L}^{2p\ka_L,2p\ka_L,p}$, $(\b^{k,p})_{1\leq k\leq N_p}$ $n$-multi-indices satisfying $|\b^{k,p}|\leq 2p$, such that  $(L_{\zeta'}\,h)^p = \sum_{k=1}^{N_p} h_k^{p}\,\del_{\zeta'}^{\b^{k,p}}$. 
\end{lem}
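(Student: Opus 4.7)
The plan for the first identity is a direct computation. Differentiating the exponential twice in $\zeta'$ yields
\[
\del_{\zeta'_i}^2 e^{2\pi i \langle \vth,s_{\rx,\zeta}(\zeta')\rangle} = \bigl[(2\pi i)^2 \langle \vth,\del_{\zeta'_i}s_{\rx,\zeta}(\zeta')\rangle^2 + 2\pi i \langle \vth,\del_{\zeta'_i}^2 s_{\rx,\zeta}(\zeta')\rangle\bigr] e^{2\pi i \langle \vth,s_{\rx,\zeta}(\zeta')\rangle},
\]
and summing over $i$ one gets $(2\pi)^{-2}\Delta_{\zeta'} e^{2\pi i\langle\vth,s\rangle} = [-\norm{^t(ds_{\rx,\zeta})_{\zeta'}(\vth)}^2 + (i/2\pi)\langle\vth,\Delta s_{\rx,\zeta}(\zeta')\rangle]\, e^{2\pi i\langle\vth,s\rangle}$. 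Hence $L_{\zeta'} e^{2\pi i\langle\vth,s\rangle} = h^{-1} e^{2\pi i\langle\vth,s\rangle}$, so $hL_{\zeta'} e^{2\pi i\langle\vth,s\rangle} = e^{2\pi i\langle\vth,s\rangle}$, and iterating gives the first claim.

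For the second claim, I would first expand $(L_{\zeta'}h)^p$ as a $\zeta'$-differential operator by induction on $p$. Since $L_{\zeta'} = 1 - (2\pi)^{-2}\Delta_{\zeta'}$ has order $2$ with constant coefficients, each application of $L_{\zeta'}h$ raises the operator order by at most $2$ and increases the number of $h$-factors in every coefficient by exactly one (either undifferentiated or, after applying Leibniz inside $\Delta_{\zeta'}$, with further $\zeta'$-derivatives on previously differentiated $h$-factors). By induction, $(L_{\zeta'}h)^p$ is a sum of terms $h^p_k\,\del^{\b^{k,p}}_{\zeta'}$ with $|\b^{k,p}|\le 2p$, where each $h^p_k$ is a linear combination of monomials that are products of exactly $p$ factors of the form $\del^{\mu^j}_{\zeta'}h$, with $\sum_j|\mu^j|\le 2(p-1)+(2p-|\b^{k,p}|)$ (a bookkeeping bound that is routine but tedious).

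The core analytic step is then to prove the estimate $\del^\a_{\zeta'} h(\rx,\zeta,\zeta',\vth) = \mathcal{O}\bigl(\langle\vth\rangle^{-2}\langle\zeta\rangle^{C_1+C_2|\a|}\langle\zeta'\rangle^{C_3+C_4|\a|}\bigr)$. The pointwise bound $|h|\le C\langle\vth\rangle^{-2}$ follows because $\Re(h^{-1}) = 1+\norm{^t(ds_{\rx,\zeta})_{\zeta'}(\vth)}^2$ and, by Lemma~\ref{Csgr}(ii), $V=(ds)_{\zeta'}$ and $V^{-1}$ are bounded so $\norm{\,^tV(\vth)}\ge c\norm{\vth}$ uniformly. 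For derivatives I would use Faa di Bruno on $h=(h^{-1})^{-1}$, writing $\del^\a_{\zeta'} h$ as a linear combination of terms $h^{1+|\nu'|}\prod_j (\del^{\mu^j}_{\zeta'}h^{-1})^{k^j}$; the factors $\del^{\mu^j}_{\zeta'}h^{-1}$ are bilinear (or linear) in $\vth$ with coefficients involving derivatives of $V$ and $\Delta s$, and Lemma~\ref{Csgr}(i) together with $(C_\sg)$ controls their $\zeta,\zeta'$-growth by powers $\langle\zeta\rangle^{\eps_v|\mu^j|}\langle\zeta'\rangle^{w_s|\mu^j|+\ka_v|\mu^j|}$ (and similarly for $\Delta s$ via the $\O^{0,0,w_s}_{\sg,\ka_v,\eps_v,\eps_v,1}$ structure). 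The two $\vth$-factors in each such term are absorbed by two of the $1+|\nu'|$ factors of $h$, leaving a net $\vth$-decay of $\langle\vth\rangle^{-2}$.

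Combining, each monomial of $h^p_k$ is a product of $p$ terms each satisfying the above $h$-estimate, so it is bounded by $C\langle\vth\rangle^{-2p}\langle\zeta\rangle^{\ka_L(2p+|\a|)}\langle\zeta'\rangle^{\ka_L(2p+|\a|)}$ for $\ka_L$ depending only on $\ka_v$, $\eps_v$, $w_s$. Differentiating further in $\zeta'$ (to check membership in $\O^{2p\ka_L,2p\ka_L,p}_{\ka_L}$) just increases the degree of the monomial by applying Leibniz, and the same bounds propagate, yielding the required estimate. The main obstacle is keeping careful track of the exponents in this last step: one must verify that the $\vth$-decay of order exactly $-2p$ is preserved uniformly, which hinges on the fact that every monomial in $h^p_k$ contains exactly $p$ independent $h$-factors (each contributing $\langle\vth\rangle^{-2}$), no more and no less. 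Once this combinatorial invariant is established by induction on $p$, the estimates above complete the proof.
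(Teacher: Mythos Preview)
Your proposal is correct and follows essentially the same approach as the paper. The paper's proof is more terse: it establishes $h\in\O_{\ka_L}^{0,0,1}$ (with $\ka_L=\max\{2\eps_v,w'_v+\ka_v\}$) via the membership $1/h\in\wt\Pi_{\sg,\ka_v,\eps_v,z}^{0,\eps_v,w'_v,2}$ together with Proposition~\ref{inverse}, and then runs exactly your induction on $p$ using the algebra/derivation rules $\O_\ka^{t_0,t_1,j}\O_\ka^{t'_0,t'_1,j'}\subseteq\O_\ka^{t_0+t'_0,t_1+t'_1,j+j'}$ and $\del_{\zeta'}^\a\O_\ka^{t_0,t_1,j}\subseteq\O_\ka^{t_0+\ka|\a|,t_1+\ka|\a|,j}$ rather than tracking the ``exactly $p$ factors of $\del^{\mu}h$'' combinatorics explicitly; your monomial bookkeeping and the paper's use of the $\O_\ka$ algebra structure are two presentations of the same argument.
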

\begin{proof} We obtain $L_{\zeta'} e^{2\pi i \langle \vth,s_{\rx,\zeta}(\zeta')\rangle}= (1/h) e^{2\pi i \langle \vth,s_{\rx,\zeta}(\zeta')\rangle}$ through a direct computation. Let us show the remaining statement by induction on $p$. Note that by Lemma \ref{Csgr} $(ii)$, we have $|1/h|\geq c  \langle\vth\rangle^{2}$ for a $c>0$ and we check that $1/h \in \wt\Pi_{\sg,\ka_v,\eps_v,z}^{0,\eps_v,w'_v,2}$ where $w'_v=\max\set{2w_v,w_v+\ka_v}$. With a reccurence or using Proposition \ref{inverse}, we check that $h\in \O_{\ka_L}^{0,0,1}$ where $\ka_L:=\max\set{2\eps_v,w'_v+\ka_v}$. The property is obviously true for $p=0$.  Suppose now that the property is true for $p\geq 0$, so that $(L_{\zeta'}\,h)^p =\sum_{k=1}^{N_p} h_k^{p}\,\del_{\zeta'}^{\b^{k,p}}$ with $N_p\in \N^*$, $(h^p_k)_{1\leq k\leq N_p}$ functions in $\O_{\ka_L}^{2p\ka_L,2p\ka_L,p}$ and $(\b^{k,p})_{1\leq k\leq N_p}$ $n$-multi-indices satisfying $|\b^{k,p}|\leq 2p$.  We also have 
\begin{align*}
&(L_{\zeta'} h)^{p+1} = (L_{\zeta'} h) \sum_{k=1}^{N_p} h_k^{p}\,\del_{\zeta'}^{\b^{k,p}} = \sum_{k=1}^{N_p} hh^{p}_k \del_{\zeta'}^{\b^{k,p}} -(2\pi)^{-2}\big(\Delta_{\zeta'}(h h_k^p) \del_{\zeta'}^{\b^{k,p}} \\
&\hspace{4cm}+2 \sum_{i=1}^n \del_{\zeta'_i} (hh^p_k)\del^{\b^{k,p}+e_i}_{\zeta'} +hh^{p}_k\Delta_{\zeta'}\del^{\b^{k,p}}_{\zeta'}\big)
\end{align*}
so the property holds for $p+1$.
\end{proof}

We note $\S_{\sg,c}(\R^{3n},L(E_z))$ the space of smooth functions $f$ such that for any $N\in \N^*$ and $\nu=(\mu,\ga)\in \N^{2n}\times \N^n$, $\del^\nu f (\rx,\zeta,\vth)=\O(\langle \rx\rangle^{-\sg N}\langle \zeta\rangle^{c_0+c_1 N +c_2 |\mu|}\langle  \vth \rangle^{-N})$. It follows from Lemma \ref{noyauReste} that if $f\in \S_{\sg,c}(\R^{3n},L(E_z))$, then $\Op_\Ga(f) \in \Op_\Ga(S^{-\infty}_{\sg,z})$. Here and thereafter $\Ga$ satisfies the hypothesis of Lemma \ref{noyauReste}.

\begin{lem}
\label{lemnegamp} Assume that $(C_\sg)$ holds.

\noindent (i) For any $l,w_0,w_1,m,\ka$, $S_{m,w_1}(\wt \Pi_{\sg,\ka,\eps_1,z}^{l,w_0,w_1,,m})\subseteq \S_{\sg,c}(\R^{3n},L(E_z))$ for a triple $c:=(c_0,c_1,c_2)$ and the linear map $S_{m,w_1}:f\mapsto S_{m,w_1}(f)$ is continuous, where
$$
S_{m,w_1}(f):(\rx,\zeta,\vth)\mapsto \int_{\R^{2n}} e^{2\pi i (\langle \vth',\zeta'\rangle+ \langle \vth,s_{\rx,\zeta}(\zeta')\rangle)}\, ^t M_{\vth'}^{p_{m,w_1},\zeta'}(f)(\rx,\zeta,\zeta',\vth') (1-\chi_{\delta,\eta})(\rx,\zeta,\zeta')\, d\vth'\,d\zeta'
$$
and $p_{m,w_1}:=\max \set{m+2n,[|w_1|]+1+2n}$.

\noindent (ii) For any $u\in \S(\R^{2n},L(E_z))$, the linear application $f\mapsto \langle\Op_\Ga S_{m,w_1}(f),u\rangle$ 
is continuous.
\end{lem}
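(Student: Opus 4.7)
The plan is to obtain (i) by a classical oscillatory integral estimate combining three independent integrations by parts, and then to deduce (ii) directly from Proposition~\ref{ampliOP}(ii). For (i) I would fix $f\in\wt\Pi_{\sg,\ka,\eps_1,z}^{l,w_0,w_1,m}$ and a multi-index $\nu=(\mu,\ga)\in\N^{2n}\times\N^n$, and aim to show that for every $N\in\N$ one has an estimate of the form
\begin{equation*}
\norm{\del^{\nu}_{\rx,\zeta,\vth} S_{m,w_1}(f)(\rx,\zeta,\vth)}_{L(E_z)}\leq C_{\nu,N}\,\norm{f}_{\wt\Pi,K_{\nu,N}}\, \langle\rx\rangle^{-\sg N}\,\langle\vth\rangle^{-N}\,\langle\zeta\rangle^{c_{0}+c_{1}N+c_{2}|\mu|}
\end{equation*}
with the constants $c_{0},c_{1},c_{2}$ independent of $N$ and $\norm{\cdot}_{\wt\Pi,K}$ a finite semi-norm of $\wt\Pi_{\sg,\ka,\eps_1,z}^{l,w_0,w_1,m}$ of order $K_{\nu,N}$.

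First I would differentiate under the integral sign. Using Lemma~\ref{phasecontrol}(i) for the phase $\langle\vth,s_{\rx,\zeta}(\zeta')\rangle$ and Lemma~\ref{lemchi}(iii) for $1-\chi_{\delta,\eta}$, the Leibniz rule turns $\del^{\nu}S_{m,w_1}(f)$ into a finite linear combination of integrals whose integrand is the product of $e^{2\pi i(\langle\vth',\zeta'\rangle+\langle\vth,s_{\rx,\zeta}(\zeta')\rangle)}$, a polynomial factor $\vth^{\om}$ with $|\om|\leq|\mu|$, a coefficient controlled by Lemma~\ref{phasecontrol}(i), and $^tM_{\vth'}^{p_{m,w_1},\zeta'}$ applied to an appropriate derivative of $f$. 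The choice $p_{m,w_1}\geq m+2n$ then makes the $\del^\b_{\vth'}f$ factor decay like $\langle\vth'\rangle^{-2n}$, securing absolute $\vth'$-integrability, while the factor $\zeta'^\b/\norm{\zeta'}^{2p_{m,w_1}}$ is the main source of smallness in $\rx$ through the lower bound $\norm{\zeta'}\geq (\delta/2)\langle\rx\rangle^{\sg\eta_1}\langle\zeta\rangle^{-\eta_2}$ on $\supp(1-\chi_{\delta,\eta})$.

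The key step, and the technical heart of the argument, is to exchange the naive $\rx$-decay coming from the support cutoff for the desired decay in $\vth$ and the required integrability in $\zeta'$. To gain powers of $\langle\vth\rangle^{-1}$, I use Lemma~\ref{lem-hL} to write $e^{2\pi i\langle\vth,s_{\rx,\zeta}(\zeta')\rangle}=(hL_{\zeta'})^{N_0}e^{2\pi i\langle\vth,s_{\rx,\zeta}(\zeta')\rangle}$; since $\chi_{\delta,\eta}$ and all its $\zeta'$-derivatives are smooth and compactly supported in $\zeta'$ (at fixed $\rx,\zeta$), integration by parts in $\zeta'$ is licit and, after moving the operator to the amplitude side, produces functions $h^{N_0}_k\in\O^{2N_0\ka_L,2N_0\ka_L,N_0}_{\ka_L}$ together with $\zeta'$-derivatives of $^tM_{\vth'}^{p_{m,w_1},\zeta'}f\cdot(1-\chi_{\delta,\eta})$ of total order $\leq 2N_0$. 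This gives the $\langle\vth\rangle^{-2N_0}$ factor at the cost of extra polynomial growth in $\zeta$ and $\zeta'$, which I then absorb: to secure $\zeta'$-integrability I apply, to the remaining $e^{2\pi i\langle\vth',\zeta'\rangle}$, the identity $e^{2\pi i\langle\vth',\zeta'\rangle}=\langle\zeta'\rangle^{-2q}L_{\vth'}^{q}e^{2\pi i\langle\vth',\zeta'\rangle}$ and integrate by parts in $\vth'$. This trades $\langle\zeta'\rangle^{-2q}$ for $\vth'$-derivatives of $^tM_{\vth'}^{p_{m,w_1},\zeta'}f$, which keep decaying in $\vth'$ because $p_{m,w_1}\geq[|w_1|]+1+2n$. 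Choosing $q$ large enough relative to the $\zeta'$-growth generated by the previous $(hL_{\zeta'})^{N_0}$ step produces an $L^1_{\vth'}L^1_{\zeta'}$ majorant. Collecting the exponents, the remaining $\rx$-dependence on $\supp(1-\chi_{\delta,\eta})$ is bounded above by a negative power of $\langle\rx\rangle$ that grows linearly in $N_0$, so choosing $N_0$ large yields the prefactor $\langle\rx\rangle^{-\sg N}\langle\vth\rangle^{-N}$ for any prescribed $N$; the residual $\zeta$-growth is of the announced form $\langle\zeta\rangle^{c_0+c_1 N+c_2|\mu|}$, which shows $S_{m,w_1}(f)\in\S_{\sg,c}(\R^{3n},L(E_z))$. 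Continuity of $S_{m,w_1}$ is automatic since each intermediate majorant is a finite sum of semi-norms of $f$ in $\wt\Pi_{\sg,\ka,\eps_1,z}^{l,w_0,w_1,m}$.

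The main obstacle is the bookkeeping of polynomial factors in $\zeta$ and $\zeta'$ that are produced at each integration by parts (by the coefficients from Lemma~\ref{phasecontrol}(i), by the $h^{N_0}_k$ from Lemma~\ref{lem-hL}, and by derivatives of $\chi_{\delta,\eta}$ via Lemma~\ref{lemchi}(iii)); these must be reabsorbed by the parameters $\eta_1,\eta_2,\delta$ of the cutoff and by taking $q$ sufficiently large, without losing the $\rx$-decay extracted from the lower bound $\norm{\zeta'}\gtrsim\langle\rx\rangle^{\sg\eta_1}\langle\zeta\rangle^{-\eta_2}$. Once this is done, (ii) follows: the space $\S_{\sg,c}(\R^{3n},L(E_z))$ is contained in an $\O_{f,z}$ with $f_{3,\rho,0,0}=-\infty<\infty$ for any $\rho<1$, hence by Proposition~\ref{ampliOP}(ii) the form $a\mapsto\langle\Op_\Ga(a),u\rangle$ is continuous on $\S_{\sg,c}(\R^{3n},L(E_z))$ for every $u\in\S(\R^{2n},L(E_z))$, and composing with the continuous map $f\mapsto S_{m,w_1}(f)$ from (i) yields continuity of $f\mapsto\langle\Op_\Ga S_{m,w_1}(f),u\rangle$.
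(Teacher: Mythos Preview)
Your plan uses the same machinery as the paper's proof: Lemma~\ref{phasecontrol}(i) for the derivatives of the phase, Lemma~\ref{lem-hL} for the $\zeta'$-integration by parts producing $\langle\vth\rangle^{-2p}$, a $\vth'$-integration by parts for extra $\zeta'$-decay, the support condition on $1-\chi_{\delta,\eta}$ to trade $\zeta'$-decay for $\langle\rx\rangle^{-\sg N}$, and Proposition~\ref{ampliOP}(ii) for part~(ii). The only structural difference is the order: the paper first performs the $\vth'$-integration by parts (via the $M$-formula~\eqref{Mformula}, increasing $p_{m,w_1}$ to $p_{m,w_1}+q$), then differentiates, and only afterwards applies Lemma~\ref{lem-hL}; you do the $\zeta'$-step first and the $\vth'$-step (via $L_{\vth'}^q$) last.

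Two points in your write-up need correction. First, your justification ``$\chi_{\delta,\eta}$ \dots compactly supported in $\zeta'$'' is backwards: the integrand carries $1-\chi_{\delta,\eta}$, whose $\zeta'$-support is the \emph{complement} of a ball. The boundary terms in the $\zeta'$-integration by parts vanish only by decay at $\zeta'\to\infty$, and before your $\vth'$-step that decay is merely $\norm{\zeta'}^{-p_{m,w_1}}\langle\zeta'\rangle^{w_1+w_s|\om+\ga|+\ka_v|\mu|}$, which fails for large $|\nu|$; this is exactly why the paper does the $\vth'$-integration by parts \emph{first}, securing arbitrary $\zeta'$-decay before touching Lemma~\ref{lem-hL}. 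Second, the $\langle\rx\rangle^{-\sg N}$ decay does not come ``linearly in $N_0$'' from the $\zeta'$-derivatives hitting $f$ or $T_{\nu',\om,s}$ (in the worst Leibniz term those derivatives give no $\rx$-gain at all); it comes, as in the paper, from converting the surplus $\zeta'$-decay $\langle\zeta'\rangle^{A_q}$ (with $A_q$ driven to $-\infty$ by the $\vth'$-step) into $\langle\rx\rangle^{\sg\eta_1 A_q}\langle\zeta\rangle^{-\eta_2 A_q}$ via the lower bound $\norm{\zeta'}\geq(\delta/2)\langle\rx\rangle^{\sg\eta_1}\langle\zeta\rangle^{-\eta_2}$ on $\supp(1-\chi_{\delta,\eta})$. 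Once you reorder the two integrations by parts and attribute the $\rx$-decay to $q$ rather than $N_0$, your argument coincides with the paper's.
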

\begin{proof} We fix $N\in \N^*$. First note that $S_{m,w_1}(f)$ is well-defined since for any $(\rx,\zeta)\in \R^{2n}$, there is $C_{\rx,\zeta}>0$ such that $\norm{\, ^t M_{\vth'}^{p_{m,w_1},\zeta'}(f)(\rx,\zeta,\zeta',\vth')(1-\chi_{\delta,\eta})(\rx,\zeta,\zeta')}\leq C_{\rx,\zeta} \langle \vth'\rangle^{-2n}\langle \zeta'\rangle^{-2n}$. Since for any $n$-multi-index $\delta$, $\del^\delta_{\vth'}\, ^t M_{\vth'}^{p_{m,w_1},\zeta'}(f)$ decrease to zero with $\vth'$, we can successively integrate by parts with (\ref{Mformula}), which is valid since $1-\chi_{\delta,\eta}$ assures that $\norm{\zeta'}\geq \half\delta$ on the domain of integration. We obtain thus for any $q\in \N^*$,  
$$
S_{m,w_1}(f):(\rx,\zeta,\vth)\mapsto \int_{\R^{2n}} e^{2\pi i (\langle \vth',\zeta'\rangle+ \langle \vth,s_{\rx,\zeta}(\zeta')\rangle)}\, ^t M_{\vth'}^{p_{m,w_1}+q,\zeta'}(f) (1-\chi_{\delta,\eta})\, d\vth'\,d\zeta'\, .
$$
We note $f_q$ the integrand of the previous integral. If $\nu=(\a,\b,\ga)=(\mu,\ga)$ is a $3n$-multi-index, we see with Lemma \ref{derivphase} that
\begin{align*}
&\del^\nu_{\rx,\zeta,\vth} f_q = e^{2\pi i \langle \vth',\zeta'\rangle}\sum_{\mu'\leq \mu} \tbinom{\mu}{\mu'} e^{2\pi i \langle \vth,s_{\rx,\zeta}(\zeta')\rangle} \sum_{|\om|\leq |\mu'|} \vth^\om T_{\nu',\om,s}(\rx,\zeta,\zeta')\\ 
&\hspace{2cm} \sum_{|\wt\delta|=p_{m,w}+q} \la_\delta (-1)^{|\wt\delta|} \tfrac{\zeta'^{\wt\delta}}{\norm{\zeta'}^{2(p_{m,w_1}+q)}} \del^{\mu-\mu'}_{\rx,\zeta} \del^{\wt\delta}_{\vth'} (f (1-\chi_{\delta,\eta}))\, .
\end{align*}
By Lemma \ref{lemchi} $(iii)$, $(\rx,\zeta,\zeta',\vth')\mapsto \chi_{\delta,\eta}(\rx,\zeta,\zeta') \, 1_{L(E_z)}$ is in $\wt \Pi_{\sg,\ka'_\eta,1,z}^{0,0,0,0}$, so the multiplication operator $f\mapsto f(1-\chi_{\delta,\eta})$ is continuous from $\wt \Pi_{\sg,\ka,\eps_1,z}^{l,w_0,w_1,m}$ into $\wt \Pi_{\sg,\ka_\eta,\eps_1,z}^{l,w_0,w_1,m}$, where $\ka_\eta=\max\set{\ka,\ka'_\eta}$.
Since $\norm{\zeta'}\geq \delta/2$ in the support of $f(1-\chi_{\delta,\eta})$, we get from Lemma \ref{phasecontrol} $(i)$ the following estimates, where $\ka''_\eta:=\ka_v+w_s+\ka_\eta$,
\begin{align*}
&\norm{\del^\nu_{\rx,\zeta,\vth} f_q(\rx,\zeta,\vth,\zeta',\vth')} \leq C_{\nu,q} \langle \vth\rangle^{|\mu|} \langle \vth'\rangle^{m-p_{m,w_1}-q} \sum_{\mu'\leq \mu}\langle \zeta\rangle^{\ka_v|\mu'|+w_0+\ka_\eta|\mu|}\\&\hspace{5cm}\times \langle \zeta'\rangle^{w_1+(\ka_v+w_s)|\mu'|+\ka_\eta|\mu|-(p_{m,w_1}+q) +w_s |\ga|} \langle\rx \rangle^{\sg|l|}\\
&\hspace{4cm} \leq C'_{\nu,q}  \langle \rx\rangle^{\sg|l|} \langle \zeta\rangle^{w_0+\ka''_\eta|\mu|} \langle \vth\rangle^{|\mu|} \langle \vth'\rangle^{m-p_{m,w_1}-q}  \langle \zeta'\rangle^{w_1+\ka''_\eta|\nu|-p_{m,w_1}-q}\, .
\end{align*}
If $k\in \N^*$, and if we set $q:=q_k$ such that $w_1+\ka''_\eta k-p_{m,w_1}-q_k\leq -2n$, we see by applying the theorem of derivation under the integral sign that $S_{m,w}(f)$ is smooth and for any $3n$-multi-index $\nu=(\a,\b,\ga)$ and $q\in \N^*$, after integrations by parts in $\vth'$, with $\nu':=(\mu',\ga)$,
\begin{align*}
&\del^\nu S_{m,w_1}(f)(\rx,\zeta,\vth)=\sum_{\mu'\leq \mu}\sum_{|\om|\leq |\mu'|} \tbinom{\mu}{\mu'} \vth^\om \int_{\R^{2n}} e^{2\pi i (\langle \vth',\zeta'\rangle + \langle \vth,s_{\rx,\zeta}(\zeta')\rangle)} T_{\nu',\om,s}(\rx,\zeta,\zeta')\\
&\hspace{5cm} ^tM_{\vth'}^{p_{m,w_1}+q_{|\nu|}+q,\zeta'} \del^{\mu-\mu'}_{\rx,\zeta} (f (1-\chi_{\delta,\eta}))\, d\vth'\,d\zeta'\, .
\end{align*}
We note $g_q(\rx,\zeta,\zeta',\vth'):=e^{2\pi i \langle \vth',\zeta'\rangle}T_{\nu',\om,s}(\rx,\zeta,\zeta') ^tM_{\vth'}^{p_{m,w_1}+q_{|\nu|}+q,\zeta'} \del^{\mu-\mu'}_{\rx,\zeta} (f (1-\chi_{\delta,\eta}))$. Using now Lemma \ref{lem-hL}, we get the estimates for any $p\in \N$,
\begin{align*}
&\norm{(L_{\zeta'}h)^p g_q(\rx,\zeta,\zeta',\vth')}\leq C_p  \langle \zeta'\rangle^{2p\ka_L}  \langle \zeta\rangle^{2p\ka_L}\langle\vth\rangle^{-2p} \sum_{k=1}^{N_p} \norm{\del_{\zeta'}^{\b^{k,p}} g_q(\rx,\zeta,\zeta',\vth')} \, .
\end{align*}
Thus, with
Lemma \ref{phasecontrol} $(i)$, we obtain with $k_1:=w_s+\ka_v+\ka_\eta+\ka_L$,
\begin{align*}
&\norm{(L_{\zeta'}h)^p g_q(\rx,\zeta,\zeta',\vth')}\leq C'_p \langle \rx\rangle^{\sg |l|} \langle \zeta'\rangle^{w_1+(2p+|\nu|)k_1-p_{m,w_1}-q_{|\nu|}-q} \langle \vth\rangle^{-2p}\\ &\langle \vth'\rangle^{2p +m-p_{m,w_1}-q_{|\nu|}-q} 
\langle \zeta\rangle^{(2p+|\mu|)k_1 +w_0}\sum_{|\wt \b|\leq 2p}\sum_{\mu'\leq \mu} \sum_{|\wt\delta|=p_{m,w_1}+q_{|\nu|}+q} q_{\mu',\wt\b,\wt\delta}(f(1-\chi_{\delta,\eta}))\, 1_{D}(\rx,\zeta,\zeta')
\end{align*}
where $D:=\set{(\rx,\zeta,\zeta')\in \R^{2n}\ | \ \norm{\zeta'}\geq \half \delta \langle \rx\rangle^{\sg\eta_1}\langle\zeta\rangle^{-\eta_2}}$.
If we now fix $p$ such that $-N-2\leq -2p+|\mu|\leq -N$, we see that by taking $q$ such that $A_q\leq -N/\eta_1 -|l|/\eta_1$ where $A_q:=w_1+(2p+|\nu|)k_1-p_{m,w_1}-q_{|\nu|}-q+2n$, and $2p+m-p_{m,w_1}-q_{|\nu|}-q\leq -2n$, we can successively integrate by parts in $\zeta'$ ($p$ times) using the formula of Lemma \ref{lem-hL}. We obtain then the estimate for given constants $c_0, c_1, c_2 >0$, 
\begin{align*}
&\norm{\del^\nu S_{m,w_1}(f)(\rx,\zeta,\vth)}\leq C_{\nu,N} \langle \rx\rangle^{-\sg N}\langle \zeta\rangle^{c_0+ c_1 N +c_2 |\mu|} \langle \vth\rangle^{-N}\\&\hspace{4cm}\sum_{|\wt \b|\leq 2p}\,
\sum_{\mu'\leq \mu}\, \sum_{|\wt \delta|=p_{m,w_1}+q_{|\nu|}+q} q_{\mu',\wt\b,\wt \delta}(f(1-\chi_{\delta,\eta}))
\end{align*}
which yields the result.

\noindent $(ii)$ This statement follows from $(i)$ and Lemma \ref{ampliOP} $(ii)$.
\end{proof}

\begin{lem}
\label{Pi-amp}
Suppose $(C_\sg)$. 

\noindent (i) Defining for any $f\in \wt \Pi_{\sg,\ka,\eps_1,z}^{l,w_0,w_1,m}$,
$$
\Pi(f):(\rx,\zeta,\vth)\mapsto \int_{\R^{2n}} e^{2\pi i (\langle \vth',\zeta'\rangle+\langle \vth,\varphi_{\rx,\zeta}(\zeta')\rangle)} f(\rx,\zeta,\zeta',\vth'+L_{\rx,\zeta}(\vth))\,\chi_{\delta,\eta}(\rx,\zeta,\zeta')\,d\zeta'\,d\vth'\, ,
$$
there is $\delta,\eta,$ such that for any $N\geq |m|$, we have $\Pi(f)=\Pi_N(f) +\Pi_{R,N}(f)$ where 
$
 \Pi_{N}(f) = \sum_{0\leq |\b|\leq N} \tfrac{(i/2\pi)^{|\b|}}{\b!} f_{\b,\varphi}
$
and there is such that $\Pi_{R,N}(f)$ satisfies the estimates for any $3n$-multi-index $\nu=(\mu,\ga)\in \N^{2n}\times \N^n$,
$$
\del^\nu \Pi_{R,N}(f) = \O( \langle \rx\rangle^{\sg(l-\eps'_1(N+1))} \langle \zeta\rangle^{k_0+k_1(N+1+|\mu|) +\eps_v|\ga|} \langle \vth\rangle^{m+|\mu|-(N+1)/2 +n})
$$
where $\eps'_1,k_0,k_1>0$.

\noindent (ii) We have for any $3n$-multi-index $\nu=(\mu,\ga)\in \N^{2n}\times \N^n$, 
$$
\del^\nu \Pi(f) = \O( \langle \rx\rangle^{\sg l} \langle \zeta\rangle^{k'_0+k'_1|\mu| +\eps_v|\ga|} \langle \vth\rangle^{m})
$$
where $k'_0,k'_1>0$. In particular, for any $u\in \S(\R^{2n},L(E_z))$, the linear application $f\mapsto \langle\Op_\Ga\Pi(f),u\rangle$ 
is continuous.
\end{lem}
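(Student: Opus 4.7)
The plan is to establish (i) by Taylor expanding $f$ in its $\vth'$-variable around $0$ up to order $N$,
\begin{equation*}
f(\rx,\zeta,\zeta',\vth'+L_{\rx,\zeta}(\vth)) = \sum_{|\b|\leq N}\tfrac{\vth'^\b}{\b!}\del^{0,0,0,\b}f(\rx,\zeta,\zeta',L_{\rx,\zeta}(\vth)) + R_{N+1},
\end{equation*}
and then, for each monomial $\vth'^\b$ that appears, applying $|\b|$ integrations by parts in $\zeta'$ via the identity $\vth'^\b\,e^{2\pi i\langle\vth',\zeta'\rangle}=(-i/2\pi)^{|\b|}\del_{\zeta'}^\b e^{2\pi i\langle\vth',\zeta'\rangle}$. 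For the principal terms the remaining $\vth'$-integration contracts against a $\delta$-measure at $\zeta'=0$, and since by Lemma \ref{lemchi}(i) we have $\chi_{\delta,\eta}(\rx,\zeta,0)=1$ with all higher $\zeta'$-derivatives vanishing there, a Leibniz expansion lets one drop every term where $\del_{\zeta'}$ falls on the cut-off; this reproduces precisely $\sum_{|\b|\leq N}\tfrac{(i/2\pi)^{|\b|}}{\b!}f_{\b,\varphi}=\Pi_N(f)$.

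The Taylor remainder yields a double oscillatory integral
\begin{align*}
\Pi_{R,N}(f)(\rx,\zeta,\vth)=\sum_{|\b|=N+1}\tfrac{(N+1)(i/2\pi)^{N+1}}{\b!}\int_0^1(1-t)^N dt\int_{\R^{2n}}e^{2\pi i\langle\vth',\zeta'\rangle}\\
\times\del_{\zeta'}^\b\bigl(e^{2\pi i\langle\vth,\varphi_{\rx,\zeta}(\zeta')\rangle}\,\chi_{\delta,\eta}\,\del^{0,0,0,\b}f(\rx,\zeta,\zeta',L_{\rx,\zeta}(\vth)+t\vth')\bigr)d\zeta'd\vth',
\end{align*}
which has to be controlled by additional integrations by parts: in $\vth'$, using $\langle\vth'\rangle^{2p}e^{2\pi i\langle\vth',\zeta'\rangle}=L_{\zeta'}^p e^{2\pi i\langle\vth',\zeta'\rangle}$ to make the $\vth'$-integral absolutely convergent and to produce the $\langle\vth'\rangle^{-2p}$-decay needed in front of $\del^{0,0,0,\b}f$; and in $\zeta'$, using the operator $h\,L_{\zeta'}$ of Lemma \ref{lem-hL}, each application of which extracts a $\langle\vth\rangle^{-2}$-factor. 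The $\eps'_1(N+1)$-improvement in $\langle\rx\rangle$ comes from combining two sources of decay: by Lemma \ref{phasecontrol}, every $\del_{\zeta'}$-derivative of the phase $e^{2\pi i\langle\vth,\varphi_{\rx,\zeta}(\zeta')\rangle}$ gives a factor $\langle\rx\rangle^{-\sg\eps_v/2}$, while every $\del_{\zeta'}$-derivative of $f\in\wt\Pi_{\sg,\ka,\eps_1,z}^{l,w_0,w_1,m}$ yields (through the $\eps_1$-rate in the $(C_\sg)$ hypothesis) an additional $\langle\rx\rangle^{-\sg\eps_1/2}$-factor; derivatives of $\chi_{\delta,\eta}$ are controlled by Lemma \ref{lemchi}(iii), and the parameters $\delta,\eta_1,\eta_2$ must be tuned so that on the support $\norm{\zeta'}\leq\delta\langle\rx\rangle^{\sg\eta_1}\langle\zeta\rangle^{-\eta_2}$ of $\chi_{\delta,\eta}$ the polynomial $\zeta$-growth coming through the phase $\varphi_{\rx,\zeta}$ remains under control.

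Part (ii) then follows by specializing (i) to $N=0$: the leading term $\Pi_0(f)=f_{0,\varphi}$ satisfies the claimed bound through Lemma \ref{phasecontrol}(iii) with $\b=0$, while $\Pi_{R,0}(f)$ is controlled by (i) with $N=0$, and combining the two yields the polynomial estimate on $\del^\nu\Pi(f)$ stated in (ii). The continuity statement for $f\mapsto\langle\Op_\Ga\Pi(f),u\rangle$ at fixed $u\in\S(\R^{2n},L(E_z))$ is then an immediate consequence of Proposition \ref{ampliOP}(ii), since by (ii) $\Pi(f)$ sits in an amplitude space $\Pi_{\sg,\ka,z}^{l,w,m}$ whose seminorms are continuously dominated by those of $f$ in $\wt\Pi_{\sg,\ka,\eps_1,z}^{l,w_0,w_1,m}$. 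The hard part will be the combinatorial bookkeeping in the remainder: one must simultaneously tune the number of $\vth'$- and $\zeta'$-integrations by parts so as to guarantee absolute convergence, match the announced $\langle\vth\rangle^{m+|\mu|-(N+1)/2+n}$-decay, and secure the $\langle\rx\rangle$-improvement at rate $\eps'_1(N+1)$, without losing control of the polynomial factors in $\zeta$ and $\zeta'$ generated by Leibniz expansions on $\varphi_{\rx,\zeta}$, $\chi_{\delta,\eta}$ and $L_{\rx,\zeta}$.
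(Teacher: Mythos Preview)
Your outline for the principal part of (i) is correct and matches the paper: Taylor expand in $\vth'$, convert each $\vth'^\b$ into $\del_{\zeta'}^\b$ by integration by parts, and use that $\chi_{\delta,\eta}$ and all its $\zeta'$-derivatives are trivial at $\zeta'=0$ to recover $\Pi_N(f)=\sum_{|\b|\le N}\tfrac{(i/2\pi)^{|\b|}}{\b!}f_{\b,\varphi}$.

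There are, however, two genuine gaps.

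\textbf{The $\langle\vth\rangle$-decay of the remainder.} You propose to extract $\langle\vth\rangle^{-2}$-factors by applying the operator $hL_{\zeta'}$ of Lemma~\ref{lem-hL}. That lemma is stated for the phase $\langle\vth,s_{\rx,\zeta}(\zeta')\rangle$, and the lower bound $|1/h|\ge c\langle\vth\rangle^2$ relies on $(ds_{\rx,\zeta})_{\zeta'}=(dr_{\rx,\zeta})_{\zeta'}$ being invertible with bounded inverse (Lemma~\ref{Csgr}(ii)). In $\Pi_{R,N}(f)$ the relevant $\zeta'$-phase is $\langle\vth',\zeta'\rangle+\langle\vth,\varphi_{\rx,\zeta}(\zeta')\rangle$, whose $\zeta'$-gradient is $\vth'+{}^t(d\varphi_{\rx,\zeta})_{\zeta'}(\vth)$; since $(d\varphi_{\rx,\zeta})_0=0$, this gradient has no lower bound of order $\langle\vth\rangle$ near $\zeta'=0$ when $\vth'$ is small, and no analogue of Lemma~\ref{lem-hL} is available. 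The paper handles this by inserting a \emph{second} cut-off $\chi_\eps(\vth,\vth')$ and splitting $J_f=J_\chi+J_{1-\chi}$. On $\{\|\vth'\|\le\eps\langle\vth\rangle\}$ (the $J_\chi$ piece) one has $\langle t\vth'+L_{\rx,\zeta}(\vth)\rangle\ge c_1\langle\vth\rangle$, so the $\langle\vth\rangle$-decay comes directly from $\del^{0,0,0,\b}f$; there one uses Lemma~\ref{phasecontrol}(ii) and the $\langle\zeta'\rangle^{-2p}L_{\vth'}^p$ device. On $\{\|\vth'\|\ge\tfrac12\eps\langle\vth\rangle\}$ (the $J_{1-\chi}$ piece) the gradient of the full phase $\omega$ is bounded below by $k\|\vth'\|$ after choosing $\delta$ small enough (this is where $\eta_1,\eta_2$ are tuned), and one integrates by parts with the first-order operator $U_{\zeta'}=(2\pi i|\nabla_{\zeta'}\omega|^2)^{-1}\sum_i(\del_{\zeta'_i}\omega)\del_{\zeta'_i}$, gaining $\langle\vth'\rangle^{-1}$ (hence $\langle\vth\rangle^{-1}$) per application. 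Without this splitting your scheme cannot produce the claimed $\langle\vth\rangle^{m+|\mu|-(N+1)/2+n}$ bound.

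\textbf{Part (ii) with $N=0$.} Setting $N=0$ does not give the stated estimate: the remainder bound from (i) has $\vth$-exponent $m+|\mu|-\tfrac12+n$, which exceeds $m$ as soon as $n\ge 1$. The paper instead applies (i) with a $\nu$-dependent choice, $N+1=\max\{2(n+|\mu|),|m|\}$, so that $m+|\mu|-(N+1)/2+n\le m$; the resulting $\zeta$-growth (through $k_1(N+1+|\mu|)$) is then absorbed into $k'_0+k'_1|\mu|$.
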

\begin{proof} 
$(i)$
We proceed to a Taylor expansion of $\wt f(\rx,\zeta,\zeta',\vth',\vth):=f(\rx,\zeta,\zeta',\vth'+L_{\rx,\zeta}(\vth))$ in $\vth'$ around zero at order $N\in \N^*$, so that 
$$
\Pi(f)= \sum_{0\leq |\b|\leq N} \tfrac{1}{\b!} I_{\b}(f) + \sum_{|\b|=N+1} \tfrac{N+1}{\b!} R_{\b,N}(f)=:\Pi_N(f)+\Pi_{R,N}(f)
$$
where 
\begin{align*}
&I_\b(f)= \int_{\R^{2n}} \vth'^\b e^{2\pi i (\langle \vth',\zeta'\rangle+\langle \vth,\varphi_{\rx,\zeta}(\zeta')\rangle)} \del^{0,0,0,\b}f(\rx,\zeta,\zeta',L_{\rx,\zeta}(\vth))\,\chi_{\delta,\eta}(\rx,\zeta,\zeta')\,d\zeta'\,d\vth'\, ,\\
&R_{\b,N}(f)= \int_{\R^{2n}} \vth'^\b e^{2\pi i (\langle \vth',\zeta'\rangle+\langle \vth,\varphi_{\rx,\zeta}(\zeta')\rangle)} r_{\b,N,f}(\rx,\zeta,\zeta',\vth',\vth)\,d\zeta'\,d\vth'\, ,
\end{align*}
and $r_{\b,N,f}:=\int_{0}^1(1-t)^N\del^{0,0,0,\b} f_\chi(\rx,\zeta,\zeta',t\vth'+L_{\rx,\zeta}(\vth))\,dt$, $f_\chi:=f\chi_{\delta,\eta}\in\wt \Pi_{\sg,\ka_\eta,z}^{l,w_0,w_1,m}$.
By integration by parts in $\zeta'$ in the integrals $I_\b(f)$, we get
$$
\Pi_{N}(f)= \sum_{0\leq |\b|\leq N} \tfrac{(i/2\pi)^{|\b|}}{\b!} \del^{\b}_{\zeta'}\big(e^{2\pi i \langle \vth,\varphi_{\rx,\zeta}(\zeta')\rangle} \del^{0,0,0,\b}f(\rx,\zeta,\zeta',L_{\rx,\zeta}(\vth))\big)_{\zeta'=0}=\sum_{0\leq |\b|\leq N} \tfrac{(i/2\pi)^{|\b|}}{\b!} f_{\b,\varphi}\, .
$$
Using integration by parts in $\zeta'$, we obtain $R_{\b,N,f}=(i/2\pi)^{|\b|} I_f$, where for any $p\in \N$, 
\begin{align*}
&I_f(\rx,\zeta,\vth):= \int_{\R^{2n}} e^{2\pi i \langle\vth',\zeta' \rangle} \del_{\zeta'}^\b G(\rx,\zeta,\zeta',\vth',\vth)\,d\zeta'\,d\vth'\,, \\
&G(\rx,\zeta,\zeta',\vth',\vth):= e^{2\pi i \langle \vth,\varphi_{\rx,\zeta}(\zeta')\rangle} r_{\b,N,f}(\rx,\zeta,\zeta',\vth',\vth) \, .
\end{align*}
Using integration by parts in $\zeta'$ and $e^{2\pi i \langle\vth',\zeta' \rangle} =\langle \vth'\rangle^{-2p}L_{\zeta'}^p e^{2\pi i \langle\vth',\zeta' \rangle}$, we check that $I_f$ is smooth on $\R^{3n}$ and if $\nu$ is a $3n$-multi-index, we see that $\del^\nu I_f$ is a linear combination of terms of the form
$$
J_f:= \vth^{\wt\om} \int_{\R^{2n}}  e^{2\pi i (\langle \vth',\zeta'\rangle+\langle \vth,\varphi_{\rx,\zeta}(\zeta')\rangle)} \del^{\b^1}_{\zeta'} T_{\nu',\wt \om,\varphi} P_{\b^2,\varphi} \del^{\nu-\nu'}_{\rx,\zeta,\vth} \del^{\b^3}_{\zeta'} r_{\b,N,f}\, d\zeta'\, d\vth'
$$
where $|\wt \om|\leq |\mu'|$, $\nu'\leq \nu$, $\sum \b^i =\b$, $|\b|=N+1$. We now cut the integral $J_f$ in two parts $J_{\chi}+ J_{1-\chi}$, where the cut-off function $\chi_{\eps}(\vth,\vth')$ appears in $J_{\chi}$.

\noindent \emph{Analysis of $J_{\chi}$}

Using Lemma \ref{phasecontrol} $(ii)$ and integration by parts in $\zeta'$, we see that $J_\chi$ is a linear combination of terms of the form 
$$
J_{\chi,\om} = \vth^{\wt\om}\vth^\om \int_{\R^{2n}}e^{2\pi i (\langle \vth',\zeta'\rangle+\langle \vth,\varphi_{\rx,\zeta}(\zeta')\rangle)}\langle \zeta'\rangle^{-2p}
t_{\om,\la}\, \del_\zeta'^{\b^1} T_{\nu',\wt \om,\varphi} \, \del_{\vth'}^{\la'}\del_{\rx,\zeta,\vth}^{\nu-\nu'} \del^{\b^3} r_{\b,N,f} \, \del^{\la+\rho-\la'}\chi_\eps\, d\zeta'\,d\vth' 
$$ 
where $p\in \N$, $|\rho|\leq 2p$, $|\om|\leq |\b^2|$, $(2|\om|-|\b^2|)_+\leq |\la|\leq |\om|$, $\la'\leq \la+\rho$. We now fix $\eps$ such that $\eps<c/2$ where $c$ is a constant such that $c\langle \vth\rangle\leq \langle L_{\rx,\zeta}(\vth)\rangle$. Thus, in the domain of integration of $J_{\chi,\om}$, we have for any $t\in [0,1]$, $\langle t\vth' + L_{\rx,\zeta}(\vth)\rangle \geq c_1 \langle \vth \rangle$ for a $c_1>0$.
As a consequence, we obtain the following estimate:
\begin{align*}
&\norm{\del_{\vth'}^{\la'}\del_{\rx,\zeta,\vth}^{\nu-\nu'} \del^{\b^3}_{\zeta'} r_{\b,N,f}}\leq C\langle \rx\rangle^{\sg(l-\eps_1|\b^3|)} \langle \zeta\rangle^{(\ka_v+\ka_\eta)|\mu-\mu'|+w_0+\ka_\eta|\b^3|}\\
&\hspace{4cm} \langle \zeta'\rangle^{w_1+\ka_\eta(|\mu-\mu'|+|\b^3|)} \langle \vth\rangle^{|\mu-\mu'|+m-|\b|-|\la'|}\, .
\end{align*}
We also deduce from Lemma \ref{phasecontrol} the estimate
$$
|t_{\om,\la}\, \del_\zeta'^{\b^1} T_{\nu',\wt \om,\varphi} |\leq C'\langle \rx\rangle^{-\sg(|\mu'|+(\eps/2)|\b^1+\b^2|)} \langle \zeta\rangle^{2\eps_v|\b^1+\b^2|+(\ka_v+\eps_v)|\mu|+\eps_v|\ga|} \langle \zeta'\rangle^{c_1(N+1)+c_2|\nu|}\, .
$$
As a consequence, by taking $p$ sufficiently big, the integrand $j(\rx,\zeta,\zeta',\vth,\vth')$ of $J_{\chi,\om}$ satisfies the estimate, for a $\eps'_1>0$ and a $k_1>0$,
$$
\norm{j}\leq C'' \langle \rx\rangle^{\sg(l-\eps'_1(N+1))} \langle \zeta\rangle^{w_0+k_1(N+1+|\mu|) +\eps_v|\ga|} \langle \zeta'\rangle^{-2n} \langle \vth\rangle^{m+|\mu|-(N+1)/2}\,  1_{D_\eps}(\vth,\vth')
$$
where $D_\eps$ is the set of $(\vth,\vth')$ in $\R^{2n}$ such that $\norm{\vth'}\leq \eps \langle \vth\rangle$.
We deduce finally that for any $\nu \in \N^{3n}$, 
$$
J_{\chi} = \O( \langle \rx\rangle^{\sg(l-\eps'_1(N+1))} \langle \zeta\rangle^{w_0+k_1(N+1+|\mu|) +\eps_v|\ga|} \langle \vth\rangle^{m+|\mu|-(N+1)/2 +n})\, .
$$

\noindent \emph{Analysis of $J_{1-\chi}$}

We set $\om:=\langle \zeta',\vth'\rangle+\langle \vth,\varphi_{\rx,\zeta}(\zeta')\rangle$. 
By Lemma \ref{Csgr} $(i)$, we have $\sum_i\norm{\del_{\zeta'_i} \varphi_{\rx,\zeta}(\zeta')} \leq C\langle\rx \rangle^{-\sg\eps_v } \langle \zeta\rangle^{c_1} \langle \zeta'\rangle^{c_2})$ for $C,c_1,c_2>0$. The presence of $\chi_{\delta,\eta}$ in the integrand of $J_{1-\chi}$ allows to use the estimate $\langle \zeta'\rangle\leq \sqrt{2} \delta \langle \rx\rangle^{\sg\eta_1} \langle \zeta\rangle^{-\eta_2}$, so that $\sum_i\norm{\del_{\zeta'_i} \varphi_{\rx,\zeta}(\zeta')} \leq C\, 2^{c_2/2}\, \delta^{c_2}$
by taking $\eta_1\leq \eps_v/c_2$ and $\eta_2\geq c_1/c_2$.
As a consequence, we obtain the following estimate in the domain of integration of $J_{1-\chi}$,
$$
|\nabla_{\zeta'} \om|^2 \geq \norm{\vth'}^2 (1- \tfrac{4}{\eps}\,C\, 2^{c_2/2}\, \delta^{c_2})\, . 
$$
We now fix $\delta$ such that $\tfrac{4}{\eps}\,C\, 2^{c_2/2}\, \delta^{c_2}<1$ so that there is $k>0$ such that $|\nabla_{\zeta'} \om| \geq k \norm{\vth'}$. Noting $U_{\zeta'}:=(2\pi i |\nabla_{\zeta'} \om|^{2})^{-1}\sum_i (\del_{\zeta'_i} \om ) \del_{\zeta'_i}$ we have (see for instance \cite{Ruzhansky}) $U_{\zeta'} e^{2\pi i \om} = e^{2\pi i \om}$ and 
$$
(^t U_{\zeta'})^r = \tfrac{1}{ |\nabla_{\zeta'} \om|^{4r} } \sum_{|\rho|\leq r} P^{\om}_{\rho,r} \del^\rho_{\zeta'}
$$
where $P_{\rho,r}^\om$ is a linear combination of terms of the form $(\nabla_{\zeta'} \om)^\pi \del^{\delta^1}_{\zeta'} \om \cdots\del^{\delta^r}_{\zeta'} \om$, with $|\pi|=2r$, $|\delta^{i}|>0$ and $\sum_{j=1}^r |\delta^j| + |\rho| =2r$. We thus obtain after integration by parts in $\zeta'$, for any $r\in \N^*$, that $J_{1-\chi}$ is a linear combination of integrals of the form
$$
 \vth^{\wt\om+\wh \om} \int_{\R^{2n}}  e^{2\pi i \om} (^t U_{\zeta'})^r\big(\del^{\b^1}_{\zeta'} T_{\nu',\wt \om,\varphi} P_{\wh\om,\b^2,\varphi} \del^{\nu-\nu'}_{\rx,\zeta,\vth} \del^{\b^3}_{\zeta'} r_{\b,N,f}\,\big)(1-\chi_\eps) d\zeta'\, d\vth'
$$
where $|\wh \om|\leq|\b^2|$. We noted $P_{\b^2,\varphi}=:\sum_{\wh \om} P_{\wh \om,\b^2,\varphi} \vth^{\wh \om}$. By Lemma \ref{phasecontrol} $(ii)$, we see that $P_{\wh \om,\b^2,\varphi} \in \O_{\sg,\ka_v,\eps_v,\eps_v,2|\b^2|}^{-\eps_v|\b^2|/2,2\eps_v|\b^2|,(w'_s+1)|\b^2|}$. Let us note $\wt T:=\del^{\b^1}_{\zeta'} T_{\nu',\wt \om,\varphi} P_{\wh\om,\b^2,\varphi}$. Lemma \ref{phasecontrol} $(i)$ yields $\wt T \in \O_{\sg,\ka_v,\eps_v,\eps_v,2(|\nu|+N)}^{-(\eps_v/2)|\b^1+\b^2|,c_0(|\mu|+N)+\eps_v|\ga|,c_0(|\nu|+N)}(\R)$ for a constant $c_0>0$.
With our choice of the parameters $\eta_1$ and $\eta_2$, we also have the following estimate, valid in the domain of integration of $J_{1-\chi}$, 
$$
\del^{\la}_{\vth'} \del^{\ga+e_i}_{\zeta'} \om  = \O\big( \langle \zeta\rangle^{\eps_v|\ga|} \langle \zeta'\rangle^{\ka_v|\ga|} \langle \vth'\rangle^{1-|\la|} \big)\, .
$$
In particular, noting $\O_{\ka_v}^{l,m}$ the space of smooth functions $f$ such that for any $n$-multi-indices $\la,\ga$, $\del^{\la}_{\vth'}\del^{\ga}_{\zeta'} f =\O\big( (\langle \zeta\rangle\langle \zeta'\rangle)^{l+\ka_v|\ga|} \langle \vth'\rangle^{m}\big)$, we see that $|\nabla_{\zeta'} \om|^2 \in \O_{\ka_v}^{0,2}$, and for any $\la\in \N^n$, $\del^\la_{\vth'} |\nabla_{\zeta'}\om|^{-4r}$ = $\O(\langle \vth'\rangle^{-4r})$. Moreover, each term $P_{\rho,r}^\om$ is in $\O_{\ka_v}^{\ka_v r , 3r}$ so that finally, for any $\la\in \N^n$ 
$$
\del^{\la}_{\vth'} \tfrac{P^{\om}_{\rho,r}}{ |\nabla_{\zeta'} \om|^{4r} }   = \O\big( (\langle \zeta\rangle\langle \zeta'\rangle)^{\ka_v r } \langle \vth'\rangle^{-r}\big)\, .
$$

We easily check that if $r\geq 2n$, then $h:=(^t U_{\zeta'})^r\big(\del^{\b^1}_{\zeta'} \wt T \del^{\nu-\nu'}_{\rx,\zeta,\vth} \del^{\b^3}_{\zeta'} r_{\b,N,f}\,\big)(1-\chi_\eps) $ satisfies the estimates for any $q\in \N$, $\norm{L_{\vth'}^q h }\leq C_{\rx,\zeta,\zeta',\vth,q} \langle \vth'\rangle^{-2n}$. As a consequence, we can permute the integration $d\zeta'd\vth'\to d\vth' d\zeta'$ and successively integrate by parts in $\vth'$, so that finally $J_{1-\chi}$ is a linear combination of terms of the form
$$
 \vth^{\wt\om+\wh \om} \int_{\R^{2n}}  e^{2\pi i \om} \langle \zeta'\rangle^{-2q} \del_{\vth'}^{\la^1}\tfrac{P^{\om}_{\rho,r}}{ |\nabla_{\zeta'} \om|^{4r} } \, \del^{\rho^1}_{\zeta'}\wt T\, \del_{\vth'}^{\la^2}\del^{\nu-\nu'}_{\rx,\zeta,\vth} \del^{\b^3+\rho^2}_{\zeta'} r_{\b,N,f}\,\del_{\vth'}^{\la^3}(1-\chi_\eps) d\vth'\,d\zeta' 
$$
where $\sum_{i}\la^i = \la$, $|\la|\leq 2q$, $\sum_i \rho^i = \rho$, $|\rho|\leq r$. We also have the following estimate for $c'_0, c'_1>0$,
$$
\del_{\vth'}^{\la^2}\del^{\nu-\nu'}_{\rx,\zeta,\vth} \del^{\b^3+\rho^2}_{\zeta'} r_{\b,N,f}= \O\big(\langle \rx\rangle^{\sg(l-|\b^3|)}(\langle \zeta\rangle\langle \zeta\rangle)^{c'_0+c'_1(|\mu-\mu'|+|\b^3|+|\rho^2|)}\big)\, .
$$
With Lemma \ref{lemchi} $(iv)$ we now see that the integrand $j'$ of the previous integral is estimated by
$$
\norm{j'} \leq C \langle \vth' \rangle^{-r+|\mu|+N+1}\langle \rx\rangle^{\sg(l-\eps'_1(N+1))} \langle \zeta\rangle^{k_0+k_1 N+k_2 r +k_3|\mu|+\eps_v |\ga|}\langle \zeta'\rangle^{-2q +k_0+k_1N +k_2r+k_3|\nu|} 
$$
for constants $k_0,k_1,k_2,k_3>0$. If we now fix $r\geq 2n$ such that $-r+|\mu|+N+1 +2n = m+|\mu|-(N+1) +n$, and $q$ such that $-2q +k_0+k_1N +k_2r+k_3|\nu|\leq -2n$
we finally obtain the estimate
$\nu \in \N^{3n}$, 
$$
J_{1-\chi} = \O( \langle \rx\rangle^{\sg(l-\eps'_1(N+1))} \langle \zeta\rangle^{k'_0+k'_1(N+1+|\mu|) +\eps_v|\ga|} \langle \vth\rangle^{m+|\mu|-(N+1) +n})\, .
$$
The result follows now from this estimate and the one obtained for $J_{\chi}$.

\noindent $(ii)$ The estimate is obtained by applying $(i)$ and $N+1= \max\set{ 2(n+|\mu|),|m|}$. The second statement is then a consequence of Lemma \ref{ampliOP} $(ii)$.
\end{proof}

\begin{thm} 
\label{compo} If $(C_\sg)$ holds, 
$\Psi_{\sigma}^\infty$ is a $*$-subalgebra of $\Re(\S)$. Moreover, if $A \in \Psi_{\sigma}^{l',m'}$ and $B\in \Psi_{\sigma}^{l,m}$, then $AB\in \Psi_{\sigma}^{l+l',m+m'}$ with the following asymptotic expansion of the normal symbol of $AB$, in a frame $(z,\bfr)$:
$$
\sigma_{0}(AB)_{z,\bfr} \sim \sum_{\b,\ga \in \N^n} c_\b c_\ga \del_{\zeta,\vth}^{\ga,\ga}\big( a(\rx,\vth)\del^{\b}_{\zeta'}\big(e^{2\pi i \langle \vth,\varphi_{\rx,\zeta}(\zeta')\rangle} (\del^{\b}_{\vth'} f_b)(\rx,\zeta,\zeta',L_{\rx,\zeta}(\vth))\big)_{\zeta'=0} \tau^{-1}_{\rx,\zeta} \big)_{\zeta=0}
$$
where  $a:=\sigma_0(A)_{z,\bfr}$, $b:=\sigma_0(B)_{z,\bfr}$, $c_\b:=(i/2\pi)^{|\b|}/\b!$ and 
$$
f_b(\rx,\zeta,\zeta',\vth'):=\tau_{\rx,r_{\rx,\zeta}(\zeta')}\,b\circ \wt \Xi(\rx,\zeta,\zeta',\vth')\,\tau_{\rx^{\zeta,\zeta'},q_{\rx,\zeta}(\zeta')}\,|J(R)|(\rx,\zeta,\zeta')\,|\det (P^{z,\bfr}_{-1,\psi(\rx,\zeta),\zeta'})^{-1}|\, .
$$

\end{thm}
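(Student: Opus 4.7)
The plan is to work in a fixed frame $(z,\bfr)$, reduce to the normal quantization $\la=0$ by Theorem~\ref{lambdainv}, and represent $(AB)_{z,\bfr}$ as $\Op_{\Ga_{0,z,\bfr}}(C)$ for a suitable amplitude $C$, to which the reductions of Lemmas~\ref{reduction} and~\ref{Pi-amp} apply. From Remark~\ref{Oplien} and Proposition~\ref{regularity} we have $A_{z,\bfr}=\Op_{\Ga_{0,z,\bfr}}(\mu_{z,\bfr}\,a)$ and $B_{z,\bfr}=\Op_{\Ga_{0,z,\bfr}}(\mu_{z,\bfr}\,b)$, so composing and pulling the $B$-integral into the $A$-integral writes $(AB)_{z,\bfr}(v)(\rx)$ as a quadruple oscillatory integral in $(\zeta,\vth,\zeta',\vth')$ with phase $\langle\vth,\zeta\rangle+\langle\vth',\zeta'\rangle$, an intermediate point $\psi(\rx,\zeta)$, a final evaluation at $\rx^{\zeta,\zeta'}$, and $\mu_{z,\bfr}$ and $\tau$ factors from both normal quantizations.

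The first key step is to perform two changes of variables. The involutive diffeomorphism $R_\rx:(\zeta,\zeta')\mapsto(r_\rx(\zeta,\zeta'),q_\rx(\zeta,\zeta'))$ replaces the final point by $\psi(\rx,\zeta)$ in the new coordinates (at the cost of $|J(R)|$), and the linear $\vth'$-shift $\vth'\mapsto-\wt P^{z,\bfr}_{-1,\psi(\rx,\zeta),\zeta'}(\vth')$ converts $\langle\vth',q_\rx\rangle$ into $\langle\vth',\zeta'\rangle$ (picking up $|\det P^{-1}|$). All these factors, together with the rearrangement of $\mu_{z,\bfr}$'s and transport maps from the intermediate point, reassemble exactly into the amplitude $f_b$ of the statement. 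The principal technical hurdle is then to verify that $f_b\in\wt\Pi^{l,w_0,w_1,m}_{\sg,\ka,1,z}$ for suitable parameters: this is exactly what hypothesis $(C_\sg)$ is tailored for, and the verification proceeds via Lemmas~\ref{Csgr}, \ref{htilde}, and~\ref{amptilde}(iii), which cascade the $(C_\sg)$-estimate on $V=dr$ into controlled estimates on $\varphi$, $s$, $L$, $q$, $J(R)$, $P$, and the transport maps $\tau$.

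Expanding $\langle\vth,r_{\rx,\zeta}(\zeta')\rangle=\langle\vth,\zeta\rangle-\langle L_{\rx,\zeta}(\vth),\zeta'\rangle+\langle\vth,\varphi_{\rx,\zeta}(\zeta')\rangle$ and absorbing the linear piece into the shift $\vth'\mapsto\vth'+L_{\rx,\zeta}(\vth)$ brings the inner integral exactly into the form of $\Pi(f_b)$ from Lemma~\ref{Pi-amp}, up to the cutoff $\chi_{\delta,\eta}$ of Lemma~\ref{lemchi}. The $(1-\chi_{\delta,\eta})$-contribution is handled by Lemma~\ref{lemnegamp}: it lies in $\S_{\sg,c}(\R^{3n},L(E_z))$ and is sent by $\Op_{\Ga_{0,z,\bfr}}$ to a smoothing operator via Corollary~\ref{correste}. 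This yields $(AB)_{z,\bfr}=\Op_{\Ga_{0,z,\bfr}}(C)$ modulo smoothing, with $C(\rx,\zeta,\vth)=\mu_{z,\bfr}(\rx)\,a(\rx,\vth)\,\Pi(f_b)(\rx,\zeta,\vth)\,\tau^{-1}_{\rx,\zeta}$, and Lemma~\ref{Pi-amp}(i) provides the asymptotic $\Pi(f_b)\sim\sum_{\b}c_\b\,f_{b,\b,\varphi}$ with the $\Pi$-class remainders tracked explicitly.

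It remains to convert the amplitude $C$ to its normal symbol. Applying Lemma~\ref{reduction}(iii) with $\la=0$ gives $T_{z,\bfr,*}(\sigma_0(AB))\sim\sum_\ga c_\ga\,\mu_{z,\bfr}^{-1}\,(\del^{(0,\ga,\ga)}C)_{\zeta=0}$; the factor $\mu_{z,\bfr}^{-1}$ cancels the outer $\mu_{z,\bfr}(\rx)$ inside $C$, and inserting the $\b$-expansion of $\Pi(f_b)$ followed by Leibniz in $\del^{(0,\ga,\ga)}_{\rx,\zeta,\vth}$ produces the stated double asymptotic series. Well-definedness of this double series as an element of $S_\sg^{l+l',m+m'}$ follows from Lemma~\ref{asympt}, once Lemma~\ref{phasecontrol}(iii) is used to confirm that the $(\b,\ga)$-term belongs to a symbol class of strictly decreasing order in $|\b|+|\ga|$. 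Thus $AB\in\Psi_\sg^{l+l',m+m'}$; combined with Proposition~\ref{pdoadjoint} (stability under $\dag$), this proves that $\Psi_\sg^\infty$ is a $*$-subalgebra of $\Re(\S)$.
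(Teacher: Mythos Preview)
Your outline follows essentially the same route as the paper: work in a fixed frame with $\la=0$, perform the $R_\rx$ change of variables and the linear $\vth'$-substitution to produce $f_b$, split via $\chi_{\delta,\eta}$ into the $\Pi(f_b)$ piece (Lemma~\ref{Pi-amp}) and the smoothing $S_{m,w_l}(f_b)$ piece (Lemma~\ref{lemnegamp}), and finish with the amplitude-to-symbol reduction of Lemma~\ref{reduction}. The chain of lemmas you cite is correct.

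There is one genuine gap. You write the composed operator as a quadruple oscillatory integral and then freely permute the order of integration, apply the nonlinear change of variables $R_\rx$, and shift $\vth'$. For a general symbol $b\in S^{l,m}_{\sg,z}$ the $\vth'$-integral is not absolutely convergent, so Fubini and the change-of-variables theorem do not apply as stated. The paper closes this gap by first assuming $b\in S^{l,-2n}_{\sg,z}$ (so that after one round of integration by parts in $\vth'$ the integrand becomes absolutely integrable in $(\zeta,\zeta',\vth')$), carrying out all the manipulations under that hypothesis, and then observing that both sides of the resulting identity $\langle(K_{AB})_{z,\bfr},u\otimes\ol v\rangle=\langle\Op_{\Ga_{0,z,\bfr}}(d_b),u\otimes\ol v\rangle$ depend continuously on $b$ (via Proposition~\ref{ampliOP}(ii), Lemma~\ref{lemnegamp}(ii), Lemma~\ref{Pi-amp}(ii), and the continuity of $b\mapsto f_b$). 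The density of $S^{-\infty}_{\sg,z}$ in $S^{l,m}_{\sg,z}$ for the $S^{l',m'}_{\sg,z}$-topology (Lemma~\ref{toposymbol}) then removes the extra decay assumption. You should insert this two-step argument; without it the oscillatory-integral manipulations are formal.

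Two minor points: the paper only obtains $f_b\in\wt\Pi^{l,w_l,w_l,m}_{\sg,\ka,\eps_1,z}$ for some $\eps_1>0$, not $\eps_1=1$ as you claim (Lemma~\ref{amptilde}(iii) gives $\eps_1=1$ for $b\circ\wt\Xi$, but the other factors coming from Lemma~\ref{Csgr}(iii) only give $\eps_v/2$); and the $\mu_{z,\bfr}$ bookkeeping is slightly different from what you describe, since the paper uses $(A_{z,\bfr}(\mu^{-1}B_{z,\bfr}(v))\,|\,u)$, so only one factor of $\mu_{z,\bfr}$ survives outside $f_b$.
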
 
\begin{proof} We fix a frame $(z,\bfr)$. We note $K_{AB}$ the kernel of the operator $AB$. As a consequence of Proposition \ref{regularity} we have for any $u,v\in \S(\R^n,E_z)$, $\langle(K_{AB})_{z,\bfr},u\ox \ol v \rangle=\big( A_{z,\bfr} (\mu^{-1} B_{z,\bfr}(v))| u\big)$. We shall note $g:=A_{z,\bfr} (\mu^{-1} B_{z,\bfr}(v))$. A computation shows that for any $\rx\in \R^n$, 
$g(\rx)= \int_{\R^n} \mu a(\rx,\vth) \,\wt b(\rx,\vth)\, d\vth$, and
$$
\wt b(\rx,\vth):= \int_{\R^{3n}} e^{2\pi i (\langle \vth,\zeta\rangle +\langle \vth',\zeta'\rangle)} \tau_{\rx,\zeta} b(\psi(\rx,\zeta),\vth')\tau_{\psi(\rx,\zeta),\zeta'} v(\rx^{\zeta,\zeta'})\,d\zeta'\,d\vth'\,d\zeta\, .
$$
We suppose at first that $b\in S^{l,-2n}_{\sg,z}$. Since $\zeta'\mapsto v(\rx^{\zeta,\zeta'})\in \S(\R^n,E_z)$, we can permute the order integration $d\zeta' d\vth' \mapsto d\vth'\,d\zeta'$ in $\wt b(\rx,\vth)$. Thus, after integrations by parts in $\vth'$, we get
for any $p\in \N^*$, 
$$
\wt b(\rx,\vth)=\int_{\R^{2n}} e^{2\pi i \langle \vth,\zeta\rangle} \tau_{\rx,\zeta}\, \big(\int_{\R^n} e^{2\pi i \langle \vth',\zeta'\rangle}\langle \zeta'\rangle^{-2p}(L_{\vth'}^p b)(\psi(\rx,\zeta),\vth')\,d\vth'\big)\, \tau_{\psi(\rx,\zeta),\zeta'}\,v(\rx^{\zeta,\zeta'})\,d\zeta'\,d\zeta\, .
$$
With the estimate $\langle \rx^{\zeta,\zeta'}\rangle \geq c \langle \zeta\rangle\langle \rx\rangle^{-1} \langle \zeta'\rangle^{-1}$ for a $c>0$, we see that for any $N\in \N$, $\norm{v(\rx^{\zeta,\zeta'})}\leq c_N q_{0,N}(v)\langle \rx\rangle^{N} \langle \zeta'\rangle^N\langle \zeta\rangle^{-N}$. 
As a consequence, we get the following estimates for the integrands $b_p$ of $\wt b(\rx,\vth)$: for any $\rx,\zeta,\zeta',\vth,\vth'$, any $p\in \N^*$ and any $N\in \N^*$, $\norm{b_p(\rx,\zeta,\zeta',\vth,\vth')}\leq C_{p,N} \langle \zeta'\rangle^{N-2p} \langle \rx\rangle^{\sg|l|+N} \langle \zeta\rangle^{\sg|l|-N}\langle \vth'\rangle^{-2n}$. Taking $N$ such that $\sg|l|-N\leq -2n$ and then taking $p$ such that $N-2p \leq -2n$, we see that $(\vth',\zeta',\zeta)\mapsto b_p(\rx,\zeta,\zeta',\vth',\vth)$ is absolutely integrable and we can thus apply the following change of variable $(\zeta,\zeta',\vth')\mapsto (R_\rx(\zeta,\zeta'),\vth')$ to $\wt b(\rx,\vth)$. After reversing the integration by parts in $\vth'$ and applying the change of variable $\vth'=-\wt P_{-1,\psi(\rx,\zeta),\zeta'}^{z,\bfr}(\vth'')$, we get  
$$
\wt b(\rx,\vth)= \int_{\R^{3n}} e^{2\pi i( \langle \vth,r_{\rx,\zeta}(\zeta')\rangle+\langle \vth',\zeta'\rangle)} f_b(\rx,\zeta,\zeta',\vth')\,v(\psi(\rx,\zeta))\,d\vth'\,d\zeta'\,d\zeta\, .
$$
By Lemma \ref{amptilde} $(ii)$ and $(iii)$, Lemma \ref{htilde} $(iii)$ and $(iv)$ and Lemma \ref{Csgr} $(iii)$, we see that $f_b\in \wt \Pi^{l,w_l,w_l,m}_{\sg,\ka,\eps_1,z}$ for a $(w_l,\ka)\in \R^2_+$ and $\eps_1>0$, and the linear application $b\mapsto f_b$ is continuous on any symbol space $S^{l,m}_{\sg,z}$ into $\wt \Pi^{l,w_l,w_l,m}_{\sg,\ka,\eps_1,z}$. We have $g(\rx)= \int_{\R^n}e^{2\pi i \langle \zeta,\vth\rangle} \mu a(\rx,\vth) \, c_b(\rx,\zeta,\vth) v(\psi(\rx,\zeta)\, d\zeta\,d\vth$ and $\langle (K_{AB})_{z,\bfr},u\ox\ol v\rangle = \langle \Op_{\Ga_{0,z,\bfr}}(d_b),u\ox \ol v\rangle$ where $d_b(\rx,\zeta,\vth):=\mu a(\rx,\vth)\,c_b(\rx,\zeta,\vth)\,\tau^{-1}(\rx,\zeta)$ and 
$$
c_b(\rx,\zeta,\vth):= \int_{\R^{2n}} e^{2\pi i( \langle \vth,s_{\rx,\zeta}(\zeta')\rangle+\langle \vth',\zeta'\rangle)} f_b(\rx,\zeta,\zeta',\vth')\,d\vth'\,d\zeta'\, .
$$ 
Using now the cut-off function $(\rx,\zeta,\zeta')\mapsto \chi_{\delta,\eta}(\rx,\zeta,\zeta')$ we see that 
\begin{align*}
c_b(\rx,\zeta,\vth)&= \Pi(f_b)(\rx,\zeta,\vth) + S_{m,w_l}(f_b)(\rx,\zeta,\vth)\, .
\end{align*}
For this equality, we used the formula of Lemma \ref{Mformula} and integration by parts in $\vth'$ in the integral $\int_{\R^{2n}} e^{2\pi i( \langle \vth,s_{\rx,\zeta}(\zeta')\rangle+\langle \vth',\zeta'\rangle)} f_b(\rx,\zeta,\zeta',\vth')(1-\chi_{\delta,\eta}(\rx,\zeta,\zeta'))\,d\vth'\,d\zeta'\,$, which are authorized since $b\in S^{l,-2n}_{\sg,z}$ by hypothesis. In $\int_{\R^{2n}} e^{2\pi i( \langle \vth,s_{\rx,\zeta}(\zeta')\rangle+\langle \vth',\zeta'\rangle)} f_b(\rx,\zeta,\zeta',\vth')\chi_{\delta,\eta}(\rx,\zeta,\zeta')\,d\vth'\,d\zeta'\,$, we translated the $\vth'$ variable by $-L_{\rx,\zeta}(\vth)$ and permuted the order of integration $d\vth'\,d\zeta'\to d\zeta' \,d\vth'$, which is legal since  $b\in S^{l,-2n}_{\sg,z}$ and $\zeta'\mapsto \chi(\rx,\zeta,\zeta')$ is of compact support. We deduce from Lemma \ref{lemnegamp} $(ii)$ and Lemma \ref{Pi-amp} $(ii)$ that $b\mapsto \langle \Op_{\Ga_{0,z,\bfr}}(d_b),u\ox \ol v\rangle$ is continuous on $S^{l,m}_{\sg,z}$, and thus, by the density result of Lemma \ref{toposymbol}, we have the equality $\langle (K_{AB})_{z,\bfr},u\ox\ol v\rangle = \langle \Op_{\Ga_{0,z,\bfr}}(d_b),u\ox \ol v\rangle$ even when the hypothesis $b\in S^{l,-2n}_{\sg,z}$ does not hold.

Let us recall the linear map $s:a\mapsto s(a)$ given in Lemma \ref{reduction} $(ii)$ (for $\Ga=\Ga_{0,z,\bfr}$) which is such that $\Op_{\Ga_{0,z,\bfr}}(f)=\Op_{\Ga_0,z,\bfr}(s(f))$ for any $f\in \Pi_{\sg,\ka,z}^{l,w,m}$. We define $f_{a,b,\b}:= \mu a (f_b)_{\b,\varphi} \tau^{-1}$, 
$r_N:= \mu a \Pi_{R,N}(f_b)\tau^{-1}$, $s_0:= \mu a S_{m,w_l}(f_b) \tau^{-1}$.
We now consider a symbol $s_{a,b}$ such that 
$$
s_{a,b}\sim \sum_{\b\in \N^n} \tfrac{(i/2\pi)^{|\b|}}{\b!} s\big( f_{a,b,\b}\big)\, .
$$
Such a symbol exists since by Lemma \ref{phasecontrol} $(iii)$, $s(f_{a,b,\b})\in S^{l+l'-\eps'_1|\b|,m+m'-|\b|/2}_{\sg,z}$.
By Lemma \ref{Pi-amp} $(i)$, we have for any $N\geq |m|$, $u_N:=s( \mu a \Pi_N(f_b) \tau^{-1}) -s_{a,b} \in S^{l+l'-\eps'_1(N+1),m+m'-(N+1)/2}_{\sg,z}$. Thus, noting $S_0:=\Op_{\Ga_{0,z,\bfr}}(s_0)$, which is in $\Op_{\Ga_{0,z,\bfr}}(S^{-\infty}_{\sg,z})$ by Lemma \ref{lemnegamp}, $R_N:= \Op_{\Ga_{0,z,\bfr}}(r_N)$ and $U_N:=\Op_{\Ga_{0,z,\bfr}}(u_N)$  we have 
\begin{align*}
(K_{AB})_{z,\bfr}=\Op_{\Ga_{0,z,\bfr}}(d_b) &= \Op_{\Ga_{0,z,\bfr}}(s( \mu a \Pi_N(f_b) \tau^{-1}) ) + R_N + S_0 \\
&= \Op_{\Ga_{0,z,\bfr}}(s_{a,b}) + U_N + R_N +S_0\, .
\end{align*}
Lemma \ref{noyauReste} and Lemma \ref{Pi-amp} $(i)$ now implies that the kernel $U_N$ + $R_N$ (which independent of $N$) is in $\Op_{\Ga_{0,z,\bfr}}(S^{-\infty}_{\sg,z})$. As a consequence, $(K_{AB})_{z,\bfr}=\Op_{\Ga_{0,z,\bfr}}(s_{a,b} + r)$ where $r\in S^{-\infty}_{\sg,z}$ and the symbol product asymptotic formula is entailed by Lemma \ref{reduction} $(ii)$.
\end{proof}

\section{Examples} \label{exsec}

In order to be able to apply the previous results about the pseudodifferential and symbolic calculi on some concrete cases, we shall see in this section examples of exponential manifolds and associated linearizations that satisfy the hypothesis $S_\sigma$-bounded geometry. 
The Euclidean space $\R^n$ seen as exponential manifold, has its own exponential map $\psi:=\exp (x,\xi)\mapsto x+\xi$ as a $S_1$-linearization, leading to the usual pseudodifferential $SG$ calculus (if $\sg=1$) or standard (if $\sg=0$) pseudodifferential calulus on $\R^n$.
However, we can define other kinds of linearization, leading to new kind of pseudodifferential and symbol calculi, with a non-bilinear linearization map. We will see in particular that we can construct on the flat $\R^n$, a family of $S_\sg$-linearizations that generalize the case of the flat euclidian geometry, and we obtain a extension of the normal ($\la=0$) and antinormal ($\la=1$) quantization on $\R^n$. 

We will also prove that the 2-dimensional hyperbolic space, which is a Cartan--Hadamard manifold (and thus an exponential Riemannian manifold) has $S_1$-bounded geometry. This allows to define a global Fourier transform, Schwartz spaces $\S(\HH)$, $\S(T^*\HH)$, $\S(T\HH)$, $\B(\HH)$ and the space of symbols $S_1^{l,m}(T^*\HH)$. As a consequence, we can define in an intrinsic way a global complete pseudodifferential calculus on $\HH$, if one chose a fixed $S_\sg$-linearization $\psi$ on $T\HH$. There are many possible linearizations, for instance one can take $\psi$ such that in a frame $(z,\bfr)$ $\psi_{z}^\bfr$ is the standard linearization $x+\xi$ of $\R^{n}$.

\subsection{A family of $S_\sg$-linearizations on the euclidean space}

Recall that $G_\sigma^\times(\R^n)$ ($0\leq \sigma \leq 1$) is defined as the subgroup of diffeomorphisms $s$ on $\R^n$ such that for any $n$-multi-index $\a\neq 0$, there are $C_\a$, $C'_\a >0$, such that for any $\rx\in \R^n$, $\norm{\del^\a s (\rx)} \leq C_\a \langle \rx \rangle^{\sigma(1-|\a|)}$ and $\norm{\del^\a s^{-1} (\rx)} \leq C'_\a \langle \rx \rangle^{\sigma(1-|\a|)}$. $G_\sigma^\times (\R^n)$ contains $GL_n(\R)$ and the translations $T_v:=w\mapsto v+w$.

We fix $\eta\in ]0,1[$ such that for any matrix $A\in \M_n(\R)$ such that $\norm{A}_1\leq  \eta$, we have $\det (I_n+A) \geq \half$, where $\norm{A}_1:= \max_{i,j}  |A_{i,j}|$. Taking now $h\in G_0(\R^n,\R^n)$ such that for any $1\leq i,j\leq n$, $|\del_j h^i|\leq \eta/16$, and 
$g(x):= h(x) - h(0) - dh_0 (x)$
we see that $s:=\Id +g$ is a diffeomorphism on $\R^n$ which belongs to $G_0^\times (\R^n)$, satisfying $s(0)=0$ and $ds_0 =\Id$.
 
\noindent We set, for $\sg\in [0,1]$,
$$
\psi(x,\xi):=x+\xi + \langle x\rangle^\sg g( \tfrac{\xi}{ \langle x\rangle^\sg} )= x+ \langle x\rangle^\sg s (  \tfrac{\xi}{ \langle x\rangle^\sg} ).
$$
We obtain the following

\begin{prop} $(\R^n,+,d\la,\psi)$ has a $S_\sg$-bounded geometry and satisfies $(C_\sg)$ (see Definition \ref{Csigma}).
\end{prop}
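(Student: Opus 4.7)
The plan is to verify both properties by direct computation in the canonical frame $(z_0,\bfr_0)=(0,(e_i))$, using the fact that the underlying exponential manifold structure on $\R^n$ is the additive one. First, the coordinate change maps $\psi_{z,z'}^{\bfr,\bfr'}$ are affine, so $(S_\sigma 1)$ holds trivially, and Lebesgue measure is manifestly a $S_\sigma^\times$-density. The hermitian bundle is trivial with the flat connection, so $\tau \equiv \mathrm{Id}$ and $(S_\sigma 2)$ together with the conditions on $\tau_t^{z,\bfr}$ in Definition \ref{sprime}(iii) are automatic. It then suffices, by Lemma \ref{redSOM} and its analogues, to check (i) and (ii) of Definition \ref{sprime} in the single frame $(0,\bfr_0)$, where $\psi_0^{\bfr_0}=\psi$ and $\ol{\psi_0^{\bfr_0}}(x,y)=\langle x\rangle^\sigma s^{-1}((y-x)/\langle x\rangle^\sigma)$.

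For (i), I would show $\psi\in H_{\sigma,\kappa}(\R^n)$ and $\ol\psi\in\O_M(\R^{2n},\R^n)$ by direct Leibniz/Faa~di~Bruno computation. The $\xi$-derivatives are trivial: $\del_\xi^\b\psi=\langle x\rangle^{\sigma(1-|\b|)}(\del^\b s)(\xi/\langle x\rangle^\sigma)$ is uniformly bounded because $s\in G_0^\times(\R^n)$. For the $x$-derivatives, $\langle x\rangle^{\pm\sigma}\in S_\sigma^\times(\R^n)$ (Lemma \ref{Ssigma}) and each derivative of $\langle x\rangle^\sigma$ contributes a gain $\langle x\rangle^{-\sigma}$; differentiating $s(\xi/\langle x\rangle^\sigma)$ via the chain rule produces terms of the form $(\del^\mu s)(\xi/\langle x\rangle^\sigma)\cdot(\text{polynomial in }\xi)\cdot\langle x\rangle^{-\sigma N}$ which are bounded in $\xi$ times a mild power. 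This gives the $H_{\sigma,\kappa}^w$ estimate with $w$ and $\kappa$ dictated by $\sigma$. The polynomial growth of $\ol\psi$ is immediate from $|s^{-1}(u)|=O(1+|u|)$, and the estimate $\psi(x,\xi)=O(\langle x\rangle\langle\xi\rangle^r)$ follows from $|s(u)|=O(1+|u|)$.

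For (ii), I would define $P^{z_0,\bfr_0}_{x,\xi}:=\int_0^1(d_\xi\Upsilon_{1,T})(x,t\xi)\,dt$. Since $\Upsilon_{1,T}(x,0)=-\ol\psi(x,x)=0$, the fundamental theorem of calculus gives $P_{x,\xi}(\xi)=\Upsilon_{1,T}(x,\xi)$, and $P_{x,0}=d_\xi\Upsilon_{1,T}(x,0)$ which a short implicit-differentiation of $\psi(\psi(x,\xi),-\Upsilon_{1,T}(x,\xi))=x$ at $\xi=0$, using $ds_0=\mathrm{Id}$, shows equals $\mathrm{Id}$. Invertibility of $P_{x,\xi}$ for all $(x,\xi)$ follows because $\xi\mapsto\Upsilon_{1,T}(x,\xi)$ is a global diffeomorphism of $\R^n$ (composition of global diffeomorphisms $\psi_x$ and $\psi_{\psi_x(\xi)}^{-1}$) with derivative $P_{x,\xi}$. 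The $E_{\sigma,\kappa}^0$-estimates on $P^{z_0,\bfr_0}$ and its inverse then reduce to the estimates on the derivatives of $\Upsilon_{1,T}(x,\xi)=-\langle\psi(x,\xi)\rangle^\sigma\,s^{-1}\big((x-\psi(x,\xi))/\langle\psi(x,\xi)\rangle^\sigma\big)$, which are handled by the same compositional mechanism as in (i), using the stability of $S_\sigma^\times$, $G_0^\times$ and the $H_{\sigma,\kappa}$, $E_{\sigma,\kappa}$ classes under composition (Lemma \ref{lemGsigma}).

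Finally, for $(C_\sigma)$, the boundedness of $(d\psi_{0,x}^{\bfr_0})_\zeta=ds|_{\zeta/\langle x\rangle^\sigma}$ and $(d\ol{\psi_{0,x}^{\bfr_0}})_y=ds^{-1}|_{(y-x)/\langle x\rangle^\sigma}$ is immediate from $s,s^{-1}\in G_0^\times(\R^n)$. For $V=(dr_{\rx,\zeta})_{\zeta'}$, I would write $r_\rx(\zeta,\zeta')=-\ol\psi_\rx(\rx^{\zeta,\zeta'})$ and differentiate, noting that $V$ factorises as $(d\ol\psi)_{\psi_\rx(\zeta)}^{x^{\zeta,\zeta'}}\circ(d\psi)_{\psi_\rx(\zeta)}^{\zeta'}$. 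Each $\zeta'$-derivative of $V$ lands on $ds^{-1}$ or $ds$ evaluated at a rescaled argument and produces an extra factor $\langle\psi_\rx(\zeta)\rangle^{-\sigma}$; since $\langle\psi_\rx(\zeta)\rangle\asymp\langle \rx\rangle\langle\zeta\rangle^{-k}$ for some $k>0$, each such factor converts into a gain $\langle\rx\rangle^{-\sigma\eps_v}$ (with $\eps_v$ small enough to absorb the $\langle\zeta\rangle$ loss), yielding $V\in\O_{\sigma,\kappa_v,\eps_v,\eps_v,0}^{0,0,w_v}(\M_n(\R))$. The main obstacle is the bookkeeping of this Faa~di~Bruno expansion — organizing all the rescaled arguments so that each $\zeta'$-derivative uniformly buys $\langle\rx\rangle^{-\sigma\eps_v}$ at the price of at most $\langle\zeta\rangle^{\eps_v}$ and $\langle\zeta'\rangle^{w_v}$ — but no new ideas are needed beyond the compositional lemmas already established in Section \ref{Ssigsec}.
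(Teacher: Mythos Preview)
Your strategy for $(S_\sigma 1)$, $(S_\sigma 2)$, the density, the $\tau$-conditions, and the boundedness of $(d\psi_\rx)_\zeta$, $(d\ol\psi_\rx)_\ry$ is fine and matches the paper. The gap is in Definition~\ref{sprime}~$(ii)$.

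You define $P_{x,\xi}:=\int_0^1(d_\xi\Ups_{1,T})(x,t\xi)\,dt$ and then assert that $P_{x,\xi}$ is invertible because $\xi\mapsto\Ups_{1,T}(x,\xi)$ is a diffeomorphism ``with derivative $P_{x,\xi}$''. This is not correct: the differential of $\xi\mapsto\Ups_{1,T}(x,\xi)$ at the point $\xi$ is $(d_\xi\Ups_{1,T})(x,\xi)$, not the integral average $\int_0^1(d_\xi\Ups_{1,T})(x,t\xi)\,dt$. The invertibility of the former (which does follow from the diffeomorphism property) says nothing \emph{a priori} about the invertibility of the latter. Even if $P_{x,\xi}$ happens to be invertible, you still need a \emph{uniform} lower bound on $|\det P_{x,\xi}|$ to place $(P^{z,\bfr})^{-1}$ in $E^0_{\sigma,\kappa}(\M_n(\R))$; pointwise invertibility is not enough.

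The paper resolves this differently. It writes $\Ups_{1,T}(x,\xi)=\xi+v_x(\xi)+w_{x,\xi}(\xi)$ with $v_x=M_x\circ g\circ M_x^{-1}$ and $w_{x,\xi}=M_{\psi(x,\xi)}\circ\wh g\circ M_{\psi(x,\xi)}^{-1}\circ M_x\circ s\circ M_x^{-1}$, where $\wh g:=g\circ s^{-1}\circ(-\Id)$, and then sets $P_{x,\xi}:=\Id+V_{x,\xi}+W_{x,\xi}$ with $V_{x,\xi}=\int_0^1(dv_x)_{t\xi}\,dt$ and $W_{x,\xi}=\int_0^1(d_\eta w_{x,\xi})_{t\xi}\,dt$ (note: the subscript $\xi$ in $w_{x,\xi}$ is held fixed in the integral, so this $P$ differs from yours). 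The point of the smallness hypothesis $|\del_j h^i|\leq\eta/16$, which you never invoke, is precisely to force $\|V_{x,\xi}\|_1\leq\eta/2$ and $\|W_{x,\xi}\|_1\leq\eta/2$, whence $\det P_{x,\xi}\geq\tfrac12$ uniformly and $P^{-1}=(\det P)^{-1}\,{}^t\!\cof P\in E^0_\sigma$. Your argument can be repaired along the same lines, but not by the diffeomorphism shortcut you propose.
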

\begin{proof} A computation shows that $\psi \in H_\sg(\R^n)$ and $\psi(\rx,\zeta)=\O(\langle \rx \rangle\langle \xi\rangle)$. We have $\ol \psi(x,y)= \langle x\rangle^\sg s^{-1}(\tfrac{y-x}{\langle x\rangle^\sg})$, and thus $\ol \psi \in \O_M(\R^{2n},\R^n)$. Noting $\wh g:=g\circ (g+\Id)^{-1}\circ -\Id \in G_0(\R^n)$, we also have 
\begin{align*}
\Ups_{1,T}(x,\xi)&= \xi+\langle x\rangle^\sg g(\tfrac{\xi}{\langle x\rangle^\sg}) + \langle \psi(x,\xi)\rangle^\sg \wh g \big(\langle \psi(x,\xi)\rangle^{-\sg}\langle x\rangle^\sg s(\tfrac{\xi}{\langle x\rangle^\sg})\big)\\
&= (\Id + V_{x,\xi} + W_{x,\xi} ) (\xi) 
\end{align*}
where $V_{x,\xi}:= [\int_0^1 \del_j v_x^i (t\xi)dt]_{i,j}$, $W_{x,\xi}:= [\int_0^1 \del_j w_{x,\xi}^i (t\xi)dt]_{i,j}$, and 
$v_x:= M_x \circ g\circ M_x^{-1}$, $w_{x,\xi} := M_{\psi(x,\xi)}\circ  \wh g \circ M_{\psi(x,\xi)}^{-1}\circ M_x \circ s \circ M_x^{-1}$, $M_x$ being the multiplication by $\langle x\rangle^\sg$. We get $dv_x = dg\circ M_x^{-1}$ and $dw_{x,\xi}=d\wh g\circ (M_{\psi(x,\xi)}^{-1}\circ M_x\circ s\circ M_x^{-1}) \, ds\circ M_x^{-1}$.   
and thus, after computations we check that $V_{x,\xi}$ and $W_{x,\xi}$ are in $E_\sg^0$. Moreover, we have $\norm{V_{x,\xi}}_1 \leq \eta/2$ and $\norm{W_{x,\xi}}_1\leq \eta/2$, which proves that $P_{x,\xi}:=\Id + V_{x,\xi} + W_{x,\xi}$ is invertible with $\det P_{x,\xi}\geq \half$. As a consequence its inverse $P_{x,\xi}^{-1}=(\det P_{x,\xi})^{-1} \, ^t \cof(P_{x,\xi})$ is also in $E_\sg^0$. We deduce then that $(\R^n,+,d\la,\psi)$ has a $S_\sg$-bounded geometry. With $r(x,\xi,\xi')=-\ol\psi(x,\psi(\psi(x,-\xi),-\xi'))$, we get
$$
 r(x,\xi,\xi') =-\langle x\rangle^\sg s^{-1}\big( s(\tfrac{-\xi}{\langle x\rangle^\sg})+ \tfrac{\langle \psi(x,-\xi)\rangle^\sg}{\langle x\rangle^\sg} s(\tfrac{-\xi'}{\langle \psi(x,-\xi)\rangle^\sg}) \big)\, .
$$
so that $(dr_{x,\xi})_{\xi'} = (d s^{-1}\circ w )\,( ds \circ u)$ where $w(x,\xi,\xi'):= s(\tfrac{-\xi}{\langle x\rangle^\sg})+ v(x,\xi,\xi')$, $v(x,\xi,\xi):= \tfrac{\langle \psi(x,-\xi)\rangle^\sg}{\langle x\rangle^\sg} s(\tfrac{-\xi'}{\langle \psi(x,-\xi)\rangle^\sg})$, $u(x,\xi,\xi'):= -\tfrac{\xi'}{\langle \psi(x,-\xi)\rangle^\sg}$. 
We check that $v$ satisfies 
$$
\del^{(\mu,\ga)} v =\O\big(\langle \psi(x,-\xi)\rangle^{-\sg |\ga|}\langle x\rangle^{-\sg( |\mu|+1)}\langle \zeta\rangle^{\ka_1|\mu|}\langle \zeta'\rangle^{|\mu|+1}\big).
$$  
It follows from Peetre's inequality that for any $\eps \in [0,1]$ and $x,y\in \R^n$, $\langle x + y\rangle \geq 2^{-\eps/2} \tfrac{\langle x\rangle^\eps}{\langle y\rangle^\eps}$, which implies that $\langle \psi(x,-\xi)\rangle^\sg =\O\big( \langle x\rangle ^{-\sg \eps} \langle \xi\rangle^{\sg\eps}\big)$. As a consequence we get the estimates
\begin{align*}
&\del^{(\mu,\ga)} w = \O\big(\langle x\rangle^{-\sg (1+|\mu|+\eps|\ga|)}\langle \zeta\rangle^{\ka_1|\mu|+ \eps|\ga| + \delta_{\ga,0}}\langle \zeta'\rangle^{|\mu|+1}\big)\, , \\ 
&\del^{(\mu,\ga)} u = \O\big(\langle x\rangle^{-\sg (|\mu|+\eps|\ga|)}\langle \zeta\rangle^{\ka_1|\mu|+ \eps|\ga|}\langle \zeta'\rangle^{1-|\ga|}\big)\, .
\end{align*}
We deduce from this that $(C_\sg)$ is satisfied.
\end{proof}

We also check that the hypothesis $(H_V)$ of section \ref{linkstd} is satisfied so that 
the previous pseudodifferential calculus (for $\la\in \set{0,1}$) is then valid on  $(\R^n,+,d\la,\psi)$, and proves in particular the space of operators of the form
$$
A (v) (x) =\int_{\R^{2n}} e^{2\pi i \langle \th,\xi\rangle } a(x,\th) v(\psi(x,-\xi)) \, d\xi \,d\th = \int_{\R^{2n}} e^{-2\pi i \langle \th,\ol\psi_{x}(y)\rangle } a(x,\th) v(y) |J(\ol\psi_x)|(y)\, dy \,d\th
$$  
where $a\in S^{\infty}_{\sg}(\R^{2n})$, is equal to the standard algebra of pseudodifferential operators $\R^n$. However, since $(C_\sg)$ is satisfied, we have now at our disposal a new symbol composition formula given by Theorem \ref{compo}, adapted to the new linearization $\psi$.

\subsection{$S_1$-geometry of the Hyperbolic plane}

The (hyperboloid model of the) 2-dimensional hyperbolic space is defined as the submanifold $\HH:=\set{x=(x_1,x_2,x_3)\in \R^3 \ : \ x_1^2+x_2^2-x_3^2=-1 \ \text{and}\ x_3>0}$ of the $(2,1)$-Minkowski space $\R^{2,1}$ with the bilinear symmetric form $\langle v,w\rangle_{2,1}=v_1 w_1+v_2 w_2-v_3 w_3$. The induced metric on $\HH$: $ds^2= (dx_1)^2+(dx_2)^2-(dx_3)^2$ is Riemannian and it is known that $\HH$ is a symmetric Cartan--Hadamard manifold with constant negative sectional curvature (equal to $-1$). The map $\varphi : \R^2\to \HH$ given by 
$$
\varphi(x,y):= (\sinh x,\,\cosh x \sinh y,\,\cosh x \cosh y )
$$
is a diffeomorphism with inverse $\varphi^{-1}(x_1,x_2,x_3)=(\argsh x_1,\,\argsh(\tfrac{x_2}{\cosh(\argsh x_1 )}))$. 
 As a consequence we can construct another model of the hyperbolic space, noted $R^2$ with domain $\R^2$ and metric obtained by pulling back the metric on $\HH$ onto $\R^2$. A computation shows that this metric is $ds^2:= (dx)^2+ \cosh^2 x \,(dy)^2$. We will note $\norm{\cdot}_{p}$ the norm on $T_p R^2\simeq \R^2$ given by this metric, where $p$ is a point in $\R^2$, and $\norm{\cdot}$ is the Euclidian norm. The geodesic equation on $R^2$ leads to the following system of ordinary differential equations:
\begin{align}
&x'' - \cosh x\,\sinh x\,(y')^2=0 \, \nonumber,\\
&y'' + 2 \tanh x \,x'\,y' = 0 \, \label{geodeq}. 
\end{align}
For each $p=(x,y)\in \R^2$ and $v\in \R^2$ such that $\norm{v}_p=1$ there exists an unique solution on $\R$ $\ga_{p,v}=(x(t),y(t))$ of (\ref{geodeq}) such that $\ga_{p,v}(0)=p$ and $\ga'_{p,v}(0)=v$. 

At each point $p=(x,y)\in \R^2$, we can define the ellipse of unit vectors centered at $0$ in $T_p R^2\simeq \R^2$ with equation $X^2+(\cosh^2 x) \, Y^2=1$. The polar equation of this ellipse is
$$
e_{p}(\th):= \tfrac{1}{\sqrt{1+\sinh^2 x \, \sin^2 \th}\, }\, .
$$
Thus, any tangent vector $v\in T_p R^2$ with decompostion $v=\norm{v}(\cos \th,\sin \th)$ also admits the following polar decomposition $v=\norm{v}_p(\cos_p \th,\sin_p \th)$ where $\cos_{p} \th :=e_{p}(\th)\,\cos \th$ and $\sin_{p} \th:=e_{p}(\th) \,\sin \th$. Remark that $e_p$, $\cos_p$, $\sin_p$ and $\norm{\cdot}_p$ are in fact independent of the second coordinate $y$ of $p$. We shall therefore also use the notations $e_x:=e_{(x,y)}$ and similarly for $\cos_x$, $\sin_x$ and $\norm{\cdot}_x$. Note that for any vector $v:=\norm{v}(\cos \th,\sin \th)$, we have $\norm{v}_x = \norm{v} / e_x(\th)$.

If $p\in \HH$ and $v\in \R^{2,1}$ are such that $\langle p,v\rangle_{2,1} =0$ and $\langle v,v\rangle_{2,1}=1$, then the unique geodesic $\a_{p,v}$ on $\HH$ such that $\a_{p,v}(0)=p$ and $\a'_{p,v}(0)=v$ is $\a_{p,v}(t)=\cosh t \, p + \sinh t\, v$ (see for instance \cite[p.195]{Jost}). As a consequence, the geodesics $\ga_{p,v}$ on the $R^2$ hyperbolic space can be obtained by pushing forward the $\a_{p,v}$ geodesics with the diffeomorphic isometry $\varphi$. We check after tedious calculations that for any given $p=(x,y)\in \R^2$ and $\th\in \R$, the following curve
\begin{align}
&\ga_{p,\th}^1(t)= \argsh\big( \cosh t \, \sinh x + \sinh t\, \cosh x \, \cos_x \th \big) \, \nonumber ,\\
&\ga_{p,\th}^2(t)=  \argsh \big ( \tfrac{\cosh t \, \cosh x\, \sinh y + \sinh t\,( \sinh x \, \sinh y\, \cos_x \th +\cosh x \, \cosh y\, \sin_x \th  )}{\cosh\big( \argsh ( \cosh t \, \sinh x + \sinh t\, \cosh x \, \cos_x \th ) \big) } \big)\, \label{exponential} ,
\end{align}
where $t\in \R$, is the unique maximal solution of the geodesic system (\ref{geodeq}) satisfying the initial conditions: $\ga_{p,\th}(0)= p$ and $\ga_{p,\th}'(0) = (\cos_x(\th),\sin_x(\th))$. An explicit formula for the exponential map at any point can therefore be obtained, since we have $\exp_p(v)=\ga_{p,\th}(\norm{v}_x)$ where $v\in T_p R^2-\set{0}$ and $\th \in \R$ such that $v=\norm{v}(\cos \th,\sin \th)$. The main interest of this hyperbolic model with domain equal to $\R^2$ is that it is possible to find explicitely the logarithmic map (the inverse of the exponential map) at any point. We find, after an elementary but long computation, the following inverse, for any $p=(x,y)$ and $p'=(x',y')\in \R^2$,
\begin{align}
&\exp_p^{-1}(p') = \tfrac{\argch f_p(p')}{\sqrt{(f_p(p'))^2-1}}\,\twobyone{-g_p(p')}{\cosh x' \,\sech x\,\sinh(y'-y) } \, \label{logarithm} , \\
&f_p(p'):=\cosh(x')\cosh(y'-y)\cosh(x)-\sinh(x')\sinh(x)\, \nonumber ,\\
&g_p(p'):= \cosh(x')\cosh(y'-y)\sinh(x)-\sinh(x')\cosh(x)\, \nonumber .
\end{align} 
We have $\norm{\exp_p^{-1}(p')}_p=\argch f_p(p')$ which is the geodesic distance between two arbitrary points $p, p'$ in the $R^2$ hyperbolic model. The goal of this section is to prove the following result.

\begin{thm}
\label{R2mainthm}
$\HH$ has a $S_1$-bounded geometry. 
\end{thm}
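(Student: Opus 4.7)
The plan is to apply Lemma \ref{Ssigma}(iv) with base frame $(z_0,\bfr_0)=((0,0),\text{canonical basis of }\R^2)$, which is orthonormal at $z_0$ since $\cosh 0=1$; it then suffices to prove $\psi_{z_0,z}^{\bfr_0,\bfr}\in G_1^\times(\R^2)$ for every frame $(z,\bfr)$. Since $\Isom(\HH)$ acts transitively on orthonormal frames and isometries send geodesics to geodesics, for each orthonormal $(z,\bfr)$ there is an isometry $\Phi$ with $\Phi(z_0)=z$ and $d\Phi_{z_0}\circ L_{\bfr_0}^{-1}=L_\bfr^{-1}$, giving $\psi_{z_0,z}^{\bfr_0,\bfr}=n_{z_0}^{\bfr_0}\circ\Phi\circ(n_{z_0}^{\bfr_0})^{-1}=:F_\Phi$; a non-orthonormal target basis contributes a $GL_2(\R)$ factor on the right which lies in $G_1^\times(\R^2)$ by Lemma \ref{Ssigma}(ii).

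As $G_1^\times(\R^2)$ is a subgroup of $\Diff(\R^2)$ by the same lemma, it is enough to establish $F_\Phi\in G_1^\times(\R^2)$ on a generating set of $\Isom(\HH)$. I take the stabilizer $O(2)$ of $z_0$ (which acts $\R$-linearly in the chart $n_{z_0}^{\bfr_0}$ and hence sits in $GL_2(\R)$) together with the one-parameter group of vertical translations $T^y_c:(x,y)\mapsto(x,y+c)$: the latter are isometries of $dx^2+\cosh^2 x\,dy^2$, the $y$-axis is a geodesic on which all Christoffel symbols vanish, and conjugating $T^y_c$ by rotations yields hyperbolic translation along every geodesic through $z_0$, so these generate the whole of $\Isom(\HH)$. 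The problem thus reduces to proving $F_{T^y_c}\in G_1^\times(\R^2)$ for every $c\in\R$.

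Setting $\rho=\norm{\rx}$ and $u(\rho)=\sinh\rho/\rho$, formula (\ref{exponential}) gives
$$\exp_{z_0}(\rx)=\Bigl(\argsh(u(\rho)\rx_1),\ \argsh\tfrac{u(\rho)\rx_2}{\sqrt{1+u(\rho)^2\rx_1^2}}\Bigr),$$
and applying $T^y_c$ then (\ref{logarithm}) writes $F_{T^y_c}(\rx)=\exp_{z_0}^{-1}(T^y_c(\exp_{z_0}(\rx)))$ in closed algebraic form involving $\sinh,\cosh,\argsh,\argch$ and square roots. The Lipschitz bound $\del F_{T^y_c}=\mathcal{O}(1)$ follows from the isometric identity $\norm{F_{T^y_c}(\rx)}=d_\HH(z_0,T^y_c(\exp_{z_0}(\rx)))$, the triangle inequality $|d_\HH(z_0,p)-d_\HH(z_0,T^y_c(p))|\leq|c|$, and the exact cancellation between the transverse Jacobi-field growth $\sinh\rho/\rho$ of $d\exp_{z_0}$ at $\rx$ and the inverse growth $\rho'/\sinh\rho'$ of $d\exp_{z_0}^{-1}$ at the translated image (with $\rho'=\rho+\mathcal{O}(1)$).

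The main obstacle is the higher-order decay $\del^\a F_{T^y_c}(\rx)=\mathcal{O}(\langle\rx\rangle^{-(|\a|-1)})$ for $|\a|\geq 2$. Geometrically, $F_{T^y_c}$ is asymptotic to a degree-one positively homogeneous map $\rx\mapsto\rho\,\phi(\rx/\rho)$, where $\phi:S^1\to S^1$ is the smooth boundary map sending the asymptotic direction of a geodesic ray at $z_0$ to the asymptotic direction of the geodesic ray at $T^y_c(z_0)$ ending at the same boundary point of $\HH$; every such smooth degree-one homogeneous map belongs automatically to $G_1^\times(\R^2)$, as each tangential differentiation drops one factor of $\rho$. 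To convert this heuristic into estimates I will expand the algebraic expression for $F_{T^y_c}$ using the asymptotics $\argch(t)=\ln(2t)+\mathcal{O}(t^{-2})$, $\argsh(t)=\ln(2t)+\mathcal{O}(t^{-2})$ as $t\to\infty$ and the identity $\cosh\argsh(t)=\sqrt{1+t^2}$, isolate the radial leading part $\rho\,\phi(\rx/\rho)$, and control the remainder by induction on $|\a|$. Lemma \ref{Ssigma}(ii) then packages the $GL_2$ factors and rotations into arbitrary isometries, and Lemma \ref{Ssigma}(iv) propagates the conclusion to arbitrary base frames.
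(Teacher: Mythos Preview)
Your isometry-group reduction is correct and takes a genuinely different tack from the paper. Using transitivity of $\Isom(\HH)$ on orthonormal frames you identify each $\psi_{z_0,z}^{\bfr_0,\bfr}$ (up to a $GL_2(\R)$ factor) with $F_\Phi=n_{z_0}^{\bfr_0}\circ\Phi\circ(n_{z_0}^{\bfr_0})^{-1}$ for an isometry $\Phi$, and reduce to generators; since $T^y_c\circ\exp_0=\exp_{(0,c)}$ (as $d(T^y_c)_0=\Id$), your map $F_{T^y_c}$ is precisely $\exp_0^{-1}\circ\exp_{(0,c)}$, and its inverse $F_{T^y_{-c}}$ has the same form. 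So you are asking for the paper's estimate only at $p=(0,c)$, whereas the paper computes $\exp_p^{-1}\circ\exp_0$ and $\exp_0^{-1}\circ\exp_p$ for every $p$. That is a real algebraic simplification, and the group-theoretic argument for why it suffices is clean.

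The gap is at what you yourself flag as ``the main obstacle'': you never establish $\del^\a F_{T^y_c}=\mathcal{O}(\langle\rx\rangle^{1-|\a|})$ for $|\a|\geq 2$. The boundary-map heuristic is qualitatively right---the action on $\del\HH$ is M\"obius, hence smooth, and rays converge exponentially to their asymptotes---but turning it into the estimate requires controlling \emph{every} derivative of the remainder $F_{T^y_c}(\rx)-\rho\,\phi(\rx/\rho)$, and this is not a formality: $\argsh$, $\argch$ and the nested square roots mix radial and angular variables at each differentiation, and your induction sketch specifies neither the hypothesis nor why it closes. The paper handles exactly this difficulty by a different device: it passes to polar coordinates via Lemma~\ref{passage} and writes the components of $\exp_0^{-1}\circ\exp_p\circ\chi_x^{-1}$ as $(\argch\sqrt{1+h^2+g^2})\cdot(h,g)/\sqrt{h^2+g^2}$ with $h,g$ of the form $a(\th)\cosh r+b(\th)\sinh r$; Lemma~\ref{techno} then gives the $A_{P,1}\times A_{P,0}\subset S_{P,1}$ bound directly from Proposition~\ref{inverse}. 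For your reduced case $p=(0,c)$ these formulas simplify substantially (since $x=0$ gives $\sinh x=0$ and $\cos_x=\cos$), so the paper's machinery would finish your argument with less work than it needs in general---but some such machinery is still required, and you have not supplied it.
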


We note $\R^2_C:=\R^2\backslash]-\infty,0]\times \set{0}$ and $\R^2_P:= ]0,+\infty[\times ]-\pi,\pi[$. For any $x\in \R$, the map $\chi_x:\R^{2}_C\to \R^2_P$ given by $\chi_x(v_1,v_2):=(\norm{v}_x, \arctan (v_1,v_2))$ where $\arctan(v_1,v_2)$ is the unique element $\th$ of $]-\pi,\pi[$ such that $v_1+i v_2 = \norm{v} \exp(i \th)$, is a diffeomorphism with inverse $\chi_x^{-1}(r,\th)= (r \cos_x \th,r\sin_x \th).$

\begin{lem}
\label{passage}
Let $x\in \R$ and $f\in C^\infty(\R^2,\R)$ such that $ f\circ \chi_x^{-1} \in C^\infty(\R^2_P,\R)$ satisfies for any $(\a,\b)\in \N^2\backslash\set{(0,0)}$, and $(r,\th)\in \R_P^2$, $|\del^{\a,\b} f\circ \chi_x^{-1}(r,\th) | \leq C_{\a,\b} \langle r\rangle^{1-\a}$
where $C_{\a,\b}>0$. Then $f\in G_{1}(\R^2,\R)$.
\end{lem}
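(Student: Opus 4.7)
The plan is to express $f$ on $\R^2_C$ as the composition $f = (f\circ\chi_x^{-1})\circ\chi_x$ and apply the Faà di Bruno formula (Theorem \ref{FaaCS}) to transfer the polar-coordinate estimates on $f\circ\chi_x^{-1}$ to Cartesian-coordinate estimates on $f$. The final step will be to extend the bound from $\R^2_C$ to all of $\R^2$ by continuity of $\del^\nu f$.

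First I would collect the following elementary decay estimates on the components of $\chi_x=(r_x,\th)$, where $r_x(v)=\norm{v}_x$ and $\th(v)=\arctan(v_1,v_2)$. A direct induction (or Faà di Bruno applied to $\norm{v}=\sqrt{v_1^2+v_2^2}$ and $\arctan(v_1,v_2)=\Im\log(v_1+iv_2)$) gives, on $\R^2_C\setminus\set{0}$, for every $2$-multi-index $\nu\neq 0$,
$$
\del^\nu\norm{v} = \O(\norm{v}^{1-|\nu|}),\qquad \del^\nu\th(v) = \O(\norm{v}^{-|\nu|}).
$$
Since $e_x(\cdot)$ is smooth, bounded, and bounded away from $0$ on $\R$ (with $x$ fixed), Faà di Bruno yields $\del^\nu(e_x\circ\th)=\O(\norm{v}^{-|\nu|})$ for $|\nu|\geq 1$, and then Leibniz gives $\del^\nu r_x = \O(\norm{v}^{1-|\nu|})$ for $|\nu|\geq 1$. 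In particular $r_x\asymp\norm{v}$ and $\langle r_x\rangle\asymp\langle v\rangle$ with constants depending only on $x$.

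Next I would plug these estimates into the Faà di Bruno formula for $\del^\nu f=\del^\nu\bigl((f\circ\chi_x^{-1})\circ\chi_x\bigr)$. A term $P_{\nu,\la}(\chi_x)$ with $\la=(\la_1,\la_2)$ is a combination of products $\prod_j(\del^{l^j}r_x)^{k^j_1}(\del^{l^j}\th)^{k^j_2}$ subject to $\sum_j k^j=\la$ and $\sum_j|k^j|l^j=\nu$. Multiplying the pointwise bounds gives
$$
P_{\nu,\la}(\chi_x)(v) = \O\Bigl(\norm{v}^{\sum_j k^j_1(1-|l^j|)-\sum_j k^j_2|l^j|}\Bigr) = \O(\norm{v}^{\la_1-|\nu|}),
$$
valid uniformly on $\set{\norm{v}\geq 1}\cap\R^2_C$. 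Combined with the hypothesis $|\del^\la(f\circ\chi_x^{-1})(r,\th)|\leq C_\la\langle r\rangle^{1-\la_1}$ and $\langle r_x\rangle\asymp\langle v\rangle$ (for both the cases $\la_1\geq 1$, where the exponent is non-positive and a lower bound on $r_x$ is used, and $\la_1=0$ where an upper bound is used), each summand in Faà di Bruno is $\O(\langle v\rangle^{1-\la_1}\cdot\langle v\rangle^{\la_1-|\nu|})=\O(\langle v\rangle^{1-|\nu|})$. Summing over $\la$ with $1\leq|\la|\leq|\nu|$ gives the desired estimate $|\del^\nu f(v)|\leq K_\nu\langle v\rangle^{1-|\nu|}$ for $v\in\R^2_C$ with $\norm{v}\geq 1$.

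To finish, I would cover the remaining region $\set{\norm{v}\leq 1}\cup\bigl(]-\infty,0]\times\set{0}\bigr)$. On any compact subset of $\R^2$, the smoothness of $f$ on $\R^2$ (not merely on $\R^2_C$) ensures that each $\del^\nu f$ is bounded; since $\langle v\rangle^{1-|\nu|}$ is bounded below on compact sets, the required estimate holds there with a possibly enlarged constant. For points $v_0\in\,]-\infty,0]\times\set{0}$ with $\norm{v_0}\geq 1$, continuity of $\del^\nu f$ lets us pass to the limit along a sequence $v_n\in\R^2_C$ with $v_n\to v_0$; since the bound $C_\nu\langle v_n\rangle^{1-|\nu|}\to C_\nu\langle v_0\rangle^{1-|\nu|}$ is uniform, the estimate extends to $v_0$. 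This yields $f\in G_1(\R^2,\R)$. The main (mild) obstacle is purely bookkeeping in the Faà di Bruno sum and the verification that the exponents combine as $1-|\nu|$ independently of $\la_1$; no genuine analytic difficulty appears because the angular singularity of $\chi_x$ is removed by continuity using the smoothness of $f$ itself.
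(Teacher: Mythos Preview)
Your proposal is correct and follows essentially the same route as the paper's proof: write $f=(f\circ\chi_x^{-1})\circ\chi_x$ on $\R^2_C$, apply Fa\`a di Bruno, use the bounds $\del^\nu r_x=\O(\langle v\rangle^{1-|\nu|})$ and $\del^\nu\th=\O(\langle v\rangle^{-|\nu|})$ to get $P_{\nu,\la}(\chi_x)=\O(\langle v\rangle^{\la_1-|\nu|})$, combine with the hypothesis to obtain $\del^\nu f=\O(\langle v\rangle^{1-|\nu|})$ on $\R^2_C$, and extend to $\R^2$ by smoothness of $f$ and density. The only cosmetic differences are that you derive the bound on $\del^\nu r_x$ via $r_x=\norm{v}/e_x(\th)$ and Leibniz (the paper simply asserts it), and you explicitly separate the regions $\norm{v}\geq 1$ and $\norm{v}\leq 1$, whereas the paper works directly with $\langle v\rangle$ throughout.
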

\begin{proof} By Theorem \ref{FaaCS}, for any $(\a,\b)\in \N^2\backslash\set{(0,0)}$, $
\del^{\a,\b}f=\sum_{1\leq |(\a',\b')|\leq |(\a,\b)|} (\del^{\a',\b'}f\circ \chi_x^{-1})\circ \chi_x \  P_{\a,\b,\a',\b'}(\chi_x)
$ on $\R^2_C$, where $P_{\a,\b,\a',\b'}(\chi_x)$ is a linear combination of functions of the form $\prod_{j=1}^s (\del^{l^j} \chi_x)^{k^j}$ where $s\in \set{1,\cdots ,\a+\b}$.  
The $k^j$ and $l^j$ are $2$-multi-indices (for $1\leq j\leq s$) such that $|k^j|>0$, $\sum_{j=1}^s k^j = (\a',\b')$ and $\sum_{j=1}^s |k^j| l^j= (\a,\b)$. By definition, $\chi_x(v)=(\chi_x^1(v),\chi_x^2(v))=(\norm{v}_x,\arctan(v_1,v_2))$. It is straightforward to check that for any 2-multi-index $\nu$, $|\del^\nu \chi_x^1 (v) |\leq C_\nu \langle v \rangle^{1-|\nu|}$ and $|\del^\nu \chi_x^2 (v)| \leq C'_\nu \langle v\rangle^{-|\nu|}$ on $\R^2_C$.  
As a consequence, for each $\a,\b,\a',\b'$ with $1\leq \a'+\b'\leq \a+\b$ there exists $C_{\a,\b,\a',\b'}>0$ such that for any $v\in \R^2_C$,
\begin{equation*}
|P_{\a,\b,\a',\b'}(\chi_x) (v)|\leq C_{\a,\b,\a',\b'}  \langle v \rangle^{\a' -(\a+\b)}\, . 
\end{equation*}
Moreover, by hypothesis, there is $C_{\a',\b'}>0$ such that for any $v\in \R^2_C$, $|(\del^{\a',\b'}f\circ \chi_x^{-1})\circ \chi_x(v)|\leq C_{\a',\b'} \langle v \rangle ^{1-\a'}$. This gives $f\in G_{1}(\R^2_C,\R)$. The extension to $G_1(\R^2,\R)$ is a direct consequence of the smoothness of $f$ on $\R^2$ and the fact that $\R^2_C$ is dense in $\R^2$. 
\end{proof}

We shall use the following proposition, which gives a formal expression of the successive derivatives of the inverse (and its real powers) of a smooth function.

\begin{prop}
\label{inverse} Let $s>0$ be given. For any nonzero $n$-multi-index ($n\in \N^*$) $\a$, 
there exist a finite nonempty set $J_\a$, nonzero real numbers $(\la_{s,\a,p})_{p\in J_{\a}}$ 
and $n$-multi-indices $\b^{\a,p,j}$ (with $p\in J_\a$, $1\leq j\leq |\a|$) such that 

\noindent - for any $p\in J_\a$, $\sum_{1\leq j\leq |\a|} \b^{\a,p,j}=\a$,

\noindent - for any smooth function $f\in C^\infty(\R^n,\R_+^*)$,
$$
\del^\a \tfrac{1}{f^s}  = \tfrac{1}{f^{|\a|+s}}\, \sum_{p\in J_\a} \la_{s,\a,p}\, \prod_{j=1}^{|\a|} \del^{\b^{\a,p,j}} f\, .
$$ 
\end{prop}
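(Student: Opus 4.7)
The plan is to proceed by induction on the length $|\alpha|\ge 1$, using only the Leibniz rule and the elementary identity $\partial_i(f^{-t}) = -t\,f^{-(t+1)}\,\partial_i f$ for any $t>0$.

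For the base case $|\alpha|=1$, write $\alpha=e_i$. The identity just quoted gives $\partial^{e_i}(f^{-s}) = -s\,f^{-(1+s)}\,\partial^{e_i}f$, so one takes $J_\alpha=\{p_0\}$, $\lambda_{s,\alpha,p_0} := -s$ (nonzero because $s>0$), and $\beta^{\alpha,p_0,1}:=e_i$. The exponent in the denominator is $|\alpha|+s=1+s$, the product has $|\alpha|=1$ factor, and the sum $\sum_{j}\beta^{\alpha,p_0,j}=e_i=\alpha$. All requirements are met.

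For the inductive step, fix $k\ge 1$, assume the statement for every multi-index of length $k$, and let $\alpha$ satisfy $|\alpha|=k+1$. Choose $i$ with $\alpha_i>0$, write $\alpha=\alpha'+e_i$ with $|\alpha'|=k$, and apply the induction hypothesis (for the same $s$) to $\alpha'$. This writes $\partial^{\alpha'}(f^{-s})$ as $f^{-(k+s)}$ times a finite $\R$-linear combination of products $\prod_{j=1}^{k}\partial^{\beta^{\alpha',p,j}}f$. Differentiating once more in the variable $x_i$ and using Leibniz produces two kinds of contributions: a \emph{denominator contribution} from $\partial_i f^{-(k+s)} = -(k+s)\,f^{-(k+s+1)}\,\partial^{e_i}f$, which multiplies each existing product by $-(k+s)\partial^{e_i}f$, and, for each $m\in\{1,\dots,k\}$, a \emph{numerator contribution} in which the factor $\partial^{\beta^{\alpha',p,m}}f$ is replaced by $\partial^{\beta^{\alpha',p,m}+e_i}f$ while the denominator remains $f^{-(k+s)}$. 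To bring every term onto the common denominator $f^{-(k+1+s)}$, one must multiply each numerator contribution by $f$.

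The single delicate point — and the only real bookkeeping issue — is what to do with this stray $f$. One observes that $f=\partial^0 f$, so the extra factor may legitimately be counted as an additional $\partial^{\beta}f$-factor with $\beta=0$; this is allowed since the statement imposes no nonvanishing condition on the $\beta^{\alpha,p,j}$. After this reinterpretation every surviving term is a product of exactly $k+1=|\alpha|$ factors $\partial^{\beta}f$ divided by $f^{|\alpha|+s}$; the index sum equals $\alpha'+e_i=\alpha$ in both the denominator and numerator cases. One then collects like products, sets $\lambda_{s,\alpha,p}$ equal to the resulting combined real coefficients, and discards any term whose coefficient is $0$; this defines $J_\alpha$ and ensures $\lambda_{s,\alpha,p}\neq 0$. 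The set $J_\alpha$ is nonempty because the denominator contribution contributes a factor $-(k+s)\neq 0$ to each inherited product and these are linearly independent products of derivatives of the generic $f$, so no full cancellation can occur. This closes the induction and proves the proposition.
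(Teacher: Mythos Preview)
Your proof is correct and follows essentially the same induction on $|\alpha|$ as the paper, including the key device of writing the stray factor $f$ as $\partial^0 f$ to get exactly $|\alpha|$ factors. The one difference is bookkeeping for the ``nonzero coefficients'' clause: the paper avoids your collect-and-discard step by simply taking $J_{\alpha}$ to be the disjoint union $J_{\alpha'}\sqcup (J_{\alpha'}\times\{1,\dots,|\alpha'|\})$, so each new coefficient is either $-(s+|\alpha'|)\lambda_{s,\alpha',p}$ or $\lambda_{s,\alpha',p}$ and hence automatically nonzero; this sidesteps your linear-independence argument for nonemptiness (which is a little loose as stated, since denominator and numerator contributions can in principle produce the same multiset of $\beta$'s---though the conclusion is of course still true because $\partial^\alpha(f^{-s})\not\equiv 0$).
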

\begin{proof} The result is true for the case $|\a|=1$. Suppose then that the result holds for any $n$-multi-index $\a$ such that $|\a|=k$, where $k\in \N^*$ and let $\a'$ be a $n$-multi-index such that $|\a'|=k+1$. Let $i$ be the smallest element of $\set{1,\cdots,n}$ such that $\a'_i\geq 1$, and set 
$\a:=(\a'_1,\cdots,\a'_{i-1},\a'_i-1,\a'_{i+1},\cdots,\a'_n)$. Thus for any $f\in C^\infty(\R^n,\R^*_+)$, $\del^{\a'}\tfrac{1}{f^s}=\del_i \del^\a \tfrac{1}{f^s}$. Since $|\a|=k$, there exist a finite nonempty set $J_\a$, nonzero real numbers $(\la_{s,\a,p})_{p\in J_{\a}}$ 
and $n$-multi-indices $\b^{\a,p,j}$ (with $p\in J_\a$, $1\leq j\leq |\a|$) such that 
for any $p\in J_\a$, $\sum_{1\leq j\leq |\a|} \b^{\a,p,j}=\a$,
and such that for any $f\in C^\infty(\R^n,\R^*_+)$,
$\del^\a \tfrac{1}{f^s}  = \tfrac{1}{f^{|\a|+s}}\, \sum_{p\in J_\a} \la_{s,\a,p}\, \prod_{j=1}^{|\a|} \del^{\b^{\a,p,j}} f$. 
As a consequence, with the formula $\del_i \prod_{j=1}^{|\a|} g_j = \sum_{q=1}^{|\a|} \prod_{j=1}^{|\a|} \del^{\delta_{q,j}e_i} g_{j}$, we obtain for any $f\in C^\infty(\R^n,\R^*_+)$, 
$$
\del^{\a'} \tfrac{1}{f^s} = \tfrac{1}{f^{|\a'|+s}}\big( \sum_{p\in J_{\a}}-(|\a|+s)\la_{s,\a,p} (\prod_{j=1}^{|\a|} \del^{\b^{\a,p,j}} f)\del_i f + \sum_{(p,q)\in J_\a\times \N_{|\a|}}\la_{s,\a,p} (\prod_{j=1}^{|\a|} \del^{\delta_{q,j}e_i + \b^{\a,p,j}} f)f \big)\, .
$$
Thus, if we take $J_{\a'}=J_{\a}\coprod (J_{\a}\times \N_{|\a|})$, $\la_{s,\a',\wt p}:= -(s+|\a|)\la_{s,\a,p}$ if $\wt p =p \in J_{\a}$, $\la_{s,\a',\wt p}:= \la_{s,\a,p}$ if $\wt p = (p,q)\in J_{\a}\times \N_{|\a|}$, $\b^{\a',\wt p,j}:= \b^{\a,p,j}$ if $\wt p =p\in J_{\a}$ and $1\leq j \leq |\a|$, $\b^{\a',\wt p,j}:= e_i$ if $\wt p =p\in J_{\a}$ and $j= |\a|+1=|\a'|$, $\b^{\a',\wt p,j}:= \delta_{q,j}e_i+\b^{\a,p,j}$ if $\wt p =(p,q)\in J_{\a}\times \N_{|\a|}$ and $1\leq j \leq |\a|$ and $\b^{\a',\wt p,j}:= 0$ if $\wt p =(p,q)\in J_{\a}\times \N_{|\a|}$ and $ j = |\a|+1=|\a'|$, the result now holds for $\a'$.
\end{proof}

In the following we set the convention $J_{0}:=\set{1}$, $\la_{s,0,1}:=1$ and $\prod_{j=1}^0:=1$, so that the formula giving $\del^\a \tfrac{1}{f^s}$ in the previous lemma is still valid when $\a=0$. When $s\in \N^*$, the result is also valid for complex valued nowhere zero smooth functions.

We note $H_P$ the space of $C^\infty(\R^2_P,\R)$ functions of the form $(r,\th)\mapsto a(\th) \cosh r + b(\th) \sinh r$ where $a,b \in \mathcal{B}(\R)$, and $A_{P,k}$ the space of functions $f\in C^\infty(\R^2_P,\R)$ such that for any 2-multi-index $(\a,\b)$ with $\a\leq k \in \N$, there is $C_{\a,\b}>0$ such that for any $(r,\th)\in \R^2_P$, $|\del^{\a,\b} f (r,\th) |\leq C_{\a,\b} \langle r\rangle^{k-\a}$, and also such that for any 2-multi-index $(\a,\b)$ with $\a\geq k+1$, there is $C'_{\a,\b}>0$ such that for any $(r,\th)\in \R^2_P$,  $|\del^{\a,\b} f (r,\th)|\leq C'_{\a,\b} e^{-2r}$. Clearly, $A_{P,k}\subset S_{P,k}$ where $S_{P,k}$ is the space of functions $f\in C^\infty(\R^2_P,\R)$ such that for any 2-multi-index $(\a,\b)$, there is $C_{\a,\b}>0$ such that for any $(r,\th)\in \R^2_P$, $|\del^{\a,\b} f (r,\th) |\leq C_{\a,\b} \langle r\rangle^{k-\a}$. By Leibniz rule, $S_{P,k}S_{P,k'}\subseteq S_{P,k+k'}$. We note $N_P$ the space of functions $f\in C^\infty(\R^2_P,\R)$ such that
for any 2-multi-index $(\a,\b)$ there is $C_{\a,\b}>0$ such that for any $(r,\th)\in \R^2_P$,  $|\del^{\a,\b} f (r,\th)|\leq C_{\a,\b} e^{-2r}$. If $r_0>0$ we define the spaces $H_{P,r_0}$, $A_{P,k,r_0}$, $S_{P,k,r_0}$ and $N_{P,r_0}$ exactly as before, except that we now replace the domain $\R^2_P$ by $\R^2_{P,r_0}:=]r_0,+\infty[\times ]-\pi,\pi[$.

\begin{lem}
\label{techno}
Let $f,g,h,w\in H_{P,r_0}$ where $r_0>0$, such that there is $\eps>0$, $C>1$ such that for any $(r,\th)\in \R^2_{P,r_0}$, $f\geq C$, $f \geq \eps\, e^{r}$ and $h^2+g^2 \geq \eps\, e^{2r}$.

\noindent (i) The functions $\tfrac{w}{(h^2+g^2)^{3/2}}$, $\tfrac{w}{(f^2-1)^{3/2}}$ and any function of the form $(r,\th)\mapsto \tfrac{\sum_{k=-4}^{4} b_k(\th)e^{kr}}{((h^2+g^2)(1+h^2+g^2))^{3/2}}$, where $b_k\in \B(\R)$, are in $N_{P,r_0}$.

\noindent (ii) The functions $\argch \sqrt{1+h^2+g^2}$ and $\argch f$ are in $A_{P,1,r_0}$.

\noindent (iii) The functions $\tfrac{w}{\sqrt{h^2+g^2}}$ and $\tfrac{w}{\sqrt{f^2-1}}$ are in  $A_{P,0,r_0}$.

\end{lem}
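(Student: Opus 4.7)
The plan is to systematically apply Proposition \ref{inverse} to differentiate the negative powers appearing in the denominators and combine with the Leibniz rule to handle the numerators, after establishing baseline growth estimates for functions in $H_{P,r_0}$.

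The first preliminary observation I would establish is that if $u \in H_{P,r_0}$, then for any $2$-multi-index $(\a,\b)$, $\del^{\a,\b} u$ is again of the form $a'(\th)\cosh r + b'(\th)\sinh r$ with $a',b' \in \B(\R)$, so $\del^{\a,\b} u = \O(e^r)$ uniformly on $\R^2_{P,r_0}$. By the Leibniz rule, a product of $k$ such factors satisfies $\del^{\a,\b}(\text{product}) = \O(e^{kr})$. The hypotheses give immediate lower bounds: $f^{-s} = \O(e^{-sr})$ from $f \geq \eps e^r$; $(h^2+g^2)^{-s} = \O(e^{-2sr})$ from $h^2+g^2 \geq \eps e^{2r}$; and $f^2-1 \geq (1-C^{-2})f^2 \geq \eps'\,e^{2r}$ (using $f \geq C > 1$), so $(f^2-1)^{-s} = \O(e^{-2sr})$. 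Similarly $(h^2+g^2)(1+h^2+g^2)\geq \eps\, e^{2r}$ so its $-s$ power is $\O(e^{-2sr})$.

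For (i), take $F := h^2+g^2 \in C^\infty(\R^2_{P,r_0},\R^*_+)$. Proposition \ref{inverse} with $s=3/2$ gives
\[
\del^{\a,\b} \tfrac{1}{(h^2+g^2)^{3/2}} = \tfrac{1}{(h^2+g^2)^{|(\a,\b)|+3/2}} \sum_{p\in J_{(\a,\b)}} \la_{3/2,(\a,\b),p} \prod_{j=1}^{|(\a,\b)|} \del^{\b^{(\a,\b),p,j}}(h^2+g^2).
\]
Since each $\del^\cdot(h^2+g^2) = \O(e^{2r})$ (it is a sum of products of two $H_{P,r_0}$ factors), the numerator is $\O(e^{2r|(\a,\b)|})$ while the denominator is bounded below by a constant times $e^{(2|(\a,\b)|+3)r}$, giving $\del^{\a,\b}(h^2+g^2)^{-3/2} = \O(e^{-3r})$. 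Applying Leibniz with $w$ (where $\del^\cdot w = \O(e^r)$) yields $\del^{\a,\b}\tfrac{w}{(h^2+g^2)^{3/2}} = \O(e^{-2r})$, so this function lies in $N_{P,r_0}$. The very same argument handles $\tfrac{w}{(f^2-1)^{3/2}}$ (with $f^2-1$ a product of two factors bounded by $e^r$) and the third expression, where the denominator provides $\O(e^{-6r})$ while the numerator grows at worst like $e^{4r}$, leaving $\O(e^{-2r})$ for all derivatives.

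For (ii) I would use $\del_x\argch = (x^2-1)^{-1/2}$ and the chain rule. For $\argch\sqrt{1+h^2+g^2}$, a direct computation gives
\[
\del^{\a,\b}_{r,\th}\argch\sqrt{1+h^2+g^2} = \tfrac{\del^{\a,\b}_{r,\th}(h^2+g^2)/2}{\sqrt{(h^2+g^2)(1+h^2+g^2)}} + (\text{higher-order terms from Fa\`a di Bruno})
\]
when $|(\a,\b)|=1$, which is $\O(1)$ as required; for $|(\a,\b)|\geq 2$ the higher-order terms from Fa\`a di Bruno (Theorem \ref{FaaCS}) are sums of products of derivatives of $h^2+g^2$ divided by half-integer powers of $h^2+g^2$ and $1+h^2+g^2$ of total exponent at least $3/2$, so by exactly the same bookkeeping as in (i) these terms lie in $N_{P,r_0}$. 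In particular, whenever $\a\geq 2$ (giving $|(\a,\b)|\geq 2$) we obtain $\O(e^{-2r})$, and the case $\a = 1$, $\b$ arbitrary similarly gives a bounded function since differentiating in $\th$ the expression $\del_r\argch\sqrt{1+h^2+g^2}$ (which is $\O(1)$) keeps us with a denominator of order $e^{2r}$ and a numerator of order $e^{2r}$. For the undifferentiated function, $\argch\sqrt{1+h^2+g^2} \leq \argch\sqrt{1+C'e^{2r}} = \O(r) = \O(\langle r\rangle)$, and $\th$-derivatives keep the $\O(\langle r\rangle)$ bound since $\del^{0,\b}$ only multiplies by factors of order $\langle r \rangle^0$ after differentiation under the $\argch$. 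The argument for $\argch f$ is identical, with $f^2-1$ replacing $h^2+g^2$ and $f^2$ replacing $1+h^2+g^2$.

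For (iii), apply the Leibniz rule and Proposition \ref{inverse} with $s=1/2$ to $w/\sqrt{h^2+g^2}$: the zeroth derivative is $\O(1)$ (numerator $\O(e^r)$, denominator $\geq \eps''e^r$), and each additional derivative produces a factor already in $N_{P,r_0}$ by (i), yielding exponential decay when $\a \geq 1$; the $\th$-only derivatives remain bounded by the same ratio argument. The analysis of $w/\sqrt{f^2-1}$ is identical. The main obstacle throughout is the careful accounting of exponential growth rates in the Fa\`a di Bruno and Proposition \ref{inverse} expansions; once that is done cleanly for (i), parts (ii) and (iii) reduce to mechanical applications of the chain rule and (i).
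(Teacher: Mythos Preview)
Your treatment of (i) is correct and is exactly what the paper does. The gap is in (ii) and (iii): the naive power-counting you propose only yields $\O(1)$ for the derivatives with $\a\geq 1$ (resp.\ $\a\geq 2$), not the required $\O(e^{-2r})$ demanded by the definition of $A_{P,0,r_0}$ (resp.\ $A_{P,1,r_0}$). Concretely, in the Fa\`a di Bruno expansion of $\del_r^2\argch f$ the term $(\argch)'(f)\,\del_r^2 f = \tfrac{\del_r^2 f}{\sqrt{f^2-1}}$ has denominator power only $1/2$, so by your bookkeeping it is merely $\O(e^r/e^r)=\O(1)$; your claim that all terms have ``total exponent at least $3/2$'' is false for this $k=1$ piece. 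Likewise, applying Leibniz and Proposition~\ref{inverse} with $s=1/2$ to $\del_r\tfrac{w}{\sqrt{h^2+g^2}}$ yields only $\O(e^r\cdot e^{-r})=\O(1)$.

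What you are missing is a genuine cancellation built into the space $H_{P,r_0}$: every $u\in H_{P,r_0}$ satisfies $\del_r^2 u = u$. Using this, one computes that $\del_r\tfrac{w}{\sqrt{h^2+g^2}} = \tfrac{w_1}{(h^2+g^2)^{3/2}}$ where $w_1 = (\del_r w)(h^2+g^2) - w(h\del_r h + g\del_r g)$ has its $e^{\pm 3r}$ terms cancel exactly, so $w_1\in H_{P,r_0}$ and part~(i) then gives $N_{P,r_0}$. Similarly $\del_r^2\argch f = \tfrac{(a^2-b^2-1)\,f}{(f^2-1)^{3/2}}$ (writing $f=a\cosh r + b\sinh r$), which is again of the form handled by~(i); and $\del_r^2\argch\sqrt{1+h^2+g^2}$ reduces to the third expression in~(i) after the analogous $e^{\pm 6r}$ cancellation. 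The paper's proof proceeds exactly by identifying these reduced forms and then invoking~(i), so the required $e^{-2r}$ decay for high $r$-derivatives comes for free; only the low-$\a$ estimates $\O(\langle r\rangle^{k-\a})$ need the direct Proposition~\ref{inverse} argument you describe.
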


\begin{proof}
$(i)$ We give a proof for $\tfrac{w}{(h^2+g^2)^{3/2}}$. The other cases are similar. By Proposition \ref{inverse} and Leibniz rule, we have for any 2-multi-index $\nu$, 
$$
\del^{\nu}\tfrac{w}{(h^2+g^2)^{3/2}} = \sum_{\nu'\leq \nu } \tbinom{\nu}{\nu'} \tfrac{\del^{\nu-\nu'} w }{(h^2+g^2)^{3/2+|\nu'|}} \sum_{p\in J_{\nu'}} \la_{3/2,\nu',p} \prod_{j=1}^{|\nu'|} \del^{\b^{\nu',p,j}} (h^2+g^2)\, .
$$
Note that we have for any 2-multi-index $\nu$, $\del^\nu (h^2+g^2) = \O(e^{2r})$ and $\del^\nu w = \O(e^r)$. The result follows.

\noindent $(ii)$ By $(i)$, since $\del_r^2 \argch \sqrt{1+h^2+g^2}$ is of the form $(r,\th)\mapsto \tfrac{\sum_{k=-4}^{4} b_k(\th)e^{kr}}{((h^2+g^2)(1+h^2+g^2))^{3/2}}$ where $b_k\in \B(\R)$, and $\del_r^2 \argch f$ is of the form $\tfrac{w}{(f^2-1)^{3/2}}$ where $w \in H_{P,r_0}$, we only need to check that for $0\leq \a \leq 1$, and $\b\in \N, \del^{\a,\b} \argch \sqrt{1+h^2+g^2} = \O(\langle r \rangle^{1-\a})$ and  $\del^{\a,\b} \argch f= \O(\langle r \rangle^{1-\a})$. Since $\del_{r} \argch \sqrt{1+h^2+g^2} = \tfrac{(\del_r h)h+(\del_r g)g}{\sqrt{(h^2+g^2)(1+h^2+g^2)}}$, $\del_{r} \argch f = \tfrac{\del_r f}{\sqrt{f^2-1}}$, $\del_{\th} \argch \sqrt{1+h^2+g^2} = \tfrac{(\del_\th h)h+(\del_\th g)g}{\sqrt{(h^2+g^2)(1+h^2+g^2)}}$ and $\del_{\th} \argch f = \tfrac{\del_\th f}{\sqrt{f^2-1}}$, the result follows from an application of Proposition \ref{inverse}.

\noindent $(iii)$ By $(i)$, since $\del_r \tfrac{w}{\sqrt{h^2+g^2}}$ is of the form $\tfrac{w_1}{(h^2+g^2)^{3/2}}$ where $w_1 \in H_{P,r_0}$, and $\del_r \tfrac{w}{\sqrt{f^2-1}}$ is of the form $\tfrac{w_2}{(f^2-1)^{3/2}}$ where $w_2 \in H_{P,r_0}$, we only need to check that for $\b\in \N, \del^{0,\b}\tfrac{w}{\sqrt{h^2+g^2}} = \O(1)$ and  $\del^{0,\b} \tfrac{w}{\sqrt{f^2-1}}= \O(1)$. This is a direct consequence of Proposition \ref{inverse}.
\end{proof}

\begin{proof}[Proof of Theorem \ref{R2mainthm}] By Lemma \ref{Ssigma} $(iii)$ and Proposition \ref{ssigmDiff}, it is sufficient to prove that for any $p:=(x,y)\in \R^2\backslash\set{0}$, $\exp_{p}^{-1}\circ \exp_0$ and $\exp_{0}^{-1}\circ \exp_p$ are in $G_1(\R^2)$. A computation based on (\ref{exponential}) and (\ref{logarithm}) shows that on $\R^2_P$,
\begin{align*}
&\exp_{p}^{-1}\circ \exp_0 \circ \chi_{0}^{-1}= (\argch f )\big (\tfrac{w_1}{\sqrt{f^2-1}}  ,  \tfrac{w_2}{\sqrt{f^2-1}}  \big )\, ,\\
&\exp_{0}^{-1}\circ \exp_{p}\circ \chi_x^{-1} = (\argch \sqrt{1+h^2+g^2} )\big(\tfrac{h}{\sqrt{h^2+g^2}} , \tfrac{g}{\sqrt{h^2+g^2}} \big)\, ,
\end{align*}
\noindent where 
\begin{align*}
&f(r,\th):= \cosh r \cosh y \cosh x - \sinh r (\sinh x \cos \th + \sinh y \cosh x \sin \th)\, ,\\
&w_1(r,\th):= -\cosh r \cosh y \sinh x + \sinh r (\cosh x \cos \th + \sinh y \sinh x \sin \th)\, ,\\
&w_2(r,\th):= - \cosh r \sinh y \sech x +\sinh r \sin \th \cosh y \sech x \, ,\\
&h(r,\th):= \cosh r \sinh x + \sinh r \cosh x \cos_x \th \, , \\
&g(r,\th):= \cosh r \cosh x \sinh y + \sinh r (\sinh x \sinh y \cos_x \th + \cosh x \cosh y \sin_x \th)\, .
\end{align*}
All these functions belong to $H_P$ and $f\geq 1$. Note that $f(r,\th)=1$ if any only if $\exp_0\chi_0^{-1}(r,\th)=p$, in which case $\exp_{p}^{-1}\circ \exp_0 \circ \chi_{0}^{-1}(r,\th)=0$,  so that $\exp_{p}^{-1}\circ \exp_0 \circ \chi_{0}^{-1}$ is well defined as a smooth function on the whole $\R^2_{P}$. The same argument holds for $\exp_{0}^{-1}\circ \exp_{p}\circ \chi_x^{-1}$. We check that 
$$
\half(\cosh x \cosh y - \sqrt{\cosh^2 x \cosh^2 y -1}) e^{r}\leq  f(r,\th)\leq \cosh r \ e^{\argch (\cosh x \cosh y)} 
$$
so that by defining $r_0:= \log 2/\eps$ where $0<\eps< \min\set {1, \half(\cosh x \cosh y - \sqrt{\cosh^2 x \cosh^2 y -1}) }$ we have for any $(r,\th)\in \R^2_{P,r_0}$, $f(r,\th)\geq \eps e^r\geq 2$. Note also that for any $v\in \R^2_C$, we have $\argch f(\chi_0(v)) = \norm{\exp_p^{-1}\circ \exp_0 (v)}_p$ and 
$$
\argch \sqrt{1+h^2(\chi_x(v))+g^2(\chi_x(v))}= \norm{\exp_0^{-1}\circ \exp_p(v)}_0\, .
$$
The first equality entails (since $\exp_p^{-1}\circ \exp_0 (\R^2_C)$ is a dense open subset of $\R^2$) that for any $v$ in $\R^2$, $\cosh \norm{v}_p \leq \cosh \norm{\exp_0^{-1}\circ \exp_p (v)}_0 e^{\argch (\cosh x\cosh y)}$. We then obtain for any $(r,\th)\in \R^2_P$, $\sqrt{1+h^2+g^2}\geq \cosh r \ e^{-\argch (\cosh x \cosh y)}$. In particular, defining 
$$r'_0:=\argch (\sqrt{2}\exp (\argch (\cosh x \cosh y))),$$ we get for any $r\geq r'_0$, the following estimate $h^2+g^2\geq \tfrac{1}{8} e^{-2\argch(\cosh x\cosh y)} e^{2r} $. If we now apply Lemma \ref{techno} for the space $H_{P,r''_0}$ where $r''_0:= \max\set{r_0,r'_0}$, we see that $\exp_{p}^{-1}\circ \exp_0 \circ \chi_{0}^{-1}$ and $\exp_{0}^{-1}\circ \exp_{p}\circ \chi_x^{-1}$ are in $S_{P,1}$. The result then follows from Lemma \ref{passage}.
\end{proof}

{\renewcommand{\thechapter}{}\renewcommand{\chaptername}{}
\addtocounter{chapter}{-1}
\chapter{Conclusion}\markboth{\sl CONCLUSION}
{\sl CONCLUSION}}

The Chamseddine--Connes spectral action is a fundamental object that lies at the interface of noncommutative geometry and its applications in particle physics. 
In this thesis, we studied the spectral action in noncommutative spaces such as the noncommutative torus and the quantum group $\SUq$. We have also been interested in some mathematical questions about commutative geometries (compact spin manifolds) and deformation quantization on manifolds with linearization. 
In all these works, pseudodifferential calculus (on abstract spectral triples or on manifolds) has played a crucial role. It is with the help of the fundamental notion of pseudodifferential operators that the spectral actions on the noncommutative torus and on $\SUq$ have been computed. 

\medskip

In chapter 2, we presented a computation of the spectral action on
the $n$-noncommutative torus $\big(\Coo(\T^n_\Th),\H,\DD\big)$, which is a simple spectral triple, even when pseudodifferential operators take into account the $J\A J^{-1}$ operators.
The spectral action in dimension 4 shows that a new noncommutative Yang-Mills term appears: $\tau(F_{\mu\nu}F^{\mu\nu})$, where $F_{\mu\nu}:=\delta_\mu(A_\nu)-\delta_\nu(A_\mu)-[A_\mu,A_\nu]$, which extends the usual commutative $F_{\mu\nu}:=\delta_\mu(A_\nu)-\delta_\nu(A_\mu)$.
A number theoretical condition on the deformation matrix $\Th$, related to Diophantine approximation theory, is crucial in the computation of
the spectral action, when the perturbation $D\to D+A +\eps J A J^{-1}$ is considered.
An interesting question remains: is this Diophantine condition really necessary to obtain this spectral action, with exactly the same Yang-Mills term $\tau(F_{\mu\nu}F^{\mu\nu})$? We conjecture that it is, as suggested by some heuristic arguments (Remark \ref{remdiophantine}).

\medskip

We presented in chapter 3 the computation of the spectral action on the spectral triple of \cite{DLSSV} associated to the quantum group $\SUq$.
Once again, we took into account the real structure $J$ and used the pseudodifferential techniques previously described in sections 1.2 and 1.3. However, contrarily to the case of the noncommutative torus, some remarkable new phenomena appear in this $q$-deformed noncommutative space. First, the dimension spectrum of $SU_q(2)$ is bounded below, which implies that there is only a finite number of terms in the spectral action expansion. Moreover, in this space, tadpoles do exist whereas they vanish on the noncommutative torus. We also found that the limit $q\to 1$ (which corresponds to a limit from the quantum 3-sphere towards the commutative 3-sphere) of the spectral action 
does not exist automatically, and when 
it exists, such limit does not lead to the associated action on the
commutative sphere $\mathbb{S}^3$. All these facts show that there is a ``wall'' between $q$-deformed geometries and the commutative world, that is nonexistent in the $\Th$-deformation of tori or Moyal planes.
Naturally, it would be interesting to investigate
other related cases like the quantum projective plane \cite{DDL}, Podle\'s spheres
\cite{D'AndreaDabrowski, D'ADLV} or the Euclidean quantum spheres
\cite{Landi, Dabrowski}, especially the 4-sphere \cite{D'ADL}.

\medskip

In the fourth chapter, we investigated possible cancellations of terms in the Chamseddine--Connes spectral action formula in commutative Riemannian spectral triples. We showed in particular that there
are no tadpoles, i.e. terms of the form $\ncint A\DD^{-1}$ are zero. More generally, the tadpoles are the $A$-linear terms in the internal fluctuation of the spectral action $S(\DD_A,\Phi,\Lambda)$. We considered, after Chamseddine and Connes \cite{CC2}, the case of a chiral boundary condition on the Dirac operator in a spin manifold with boundary. 
It turns out that, in this case, there are no tadpoles up to order 5. 
However, this approach is based on explicit computations of first heat kernel coefficients associated to a mixed boundary condition. We expect that there are no tadpoles at any order for a chiral boundary condition and shall investigate this question, using a spectral triple approach, in a subsequent paper \cite{avenir}. 

\medskip

We have seen, in chapter 5, certain hypotheses on the geometry ($S_\sg$ or $\O_M$-bounded geometry) of a manifold with linearization that allow a coordinate free definition of most of the topological vector spaces needed for Fourier analysis and global complete symbol calculus with uniform and decaying control over the $x$ variable. Given a linearization on the manifold with some properties of control at infinity, we constructed symbol maps and $\la$-quantization, explicit Moyal star-products on the cotangent bundle, and classes of pseudodifferential operators. We proved a result of stability under composition, and an associated symbol product asymptotic formula under a hypothesis $(C_\sg)$ of control at infinity of the linearization. The calculus presented here is a generalization of the standard and $SG$ symbol calculi over the Euclidean space $\R^n$ and can be applied to the hyperbolic 2-space since, as proven in section 5.2, it has a $S_1$-bounded geometry.  $L^2$-continuity of pseudodifferential operators of order $(0,0)$ has been established in section \ref{linkstd} under the hypothesis $(H_V)$. We do not know however if this result still holds without this hypothesis.
The full analysis of the obtained Moyal product on $\S(T^*M)$ and spectral properties of pseudodifferential operators in $\Psi_{\sg}^{l,m}$ remain to be studied. The main goal would be the construction of noncommutative noncompact spectral triples based on the algebra $(\S(T^*M),\circ_W)$, which could extend the spectral triple described in \cite{Gayral}.
Moreover, extension and connection of the symbol calculus presented here could be made with Fourier integral operators \cite{Coriasco,Ruzhansky3,Ruzhansky2}, regularized traces \cite{Paycha} and Gelfand--Shilov spaces \cite{Cappiello}.

\medskip

Finally, the spectral actions that we computed are classical, and quantization through functional integration of these actions still remains an open and challenging problem.

{\renewcommand{\thechapter}{}\renewcommand{\chaptername}{}
\addtocounter{chapter}{-1}
\chapter{Résumé de la thèse en français}\markboth{\sl Résumé de la thèse}
{\sl Résumé de la thèse}

Cette thèse constitue un recueil des travaux de recherche que j'ai effectués ces trois dernières années en collaboration avec Driss Essouabri, Bruno Iochum et Andrzej Sitarz.

\bigskip

La géométrie non commutative est un vaste domaine des mathématiques dont l'objet est
la généralisation de l'ensemble des concepts apparaissant en géométrie
classique. Plus particulièrement, à l'aide d'un formalisme issu de l'analyse
fonctionnelle, de la théorie des algèbres d'opérateurs, de la théorie spectrale et de
la géométrie spinorielle, la géométrie non commutative généralise notamment les
concepts d'espace topologique localement compact, de variété riemannienne orientée
compacte à spin et les calculs différentiels et intégraux de la géométrie
différentielle. Au-delà de l'intérêt purement
mathématique de la géométrie non commutative, il existe des motivations physiques
profondes qui poussent les physiciens théoriciens à utiliser ces concepts 
pour décrire les éléments fondamentaux de la physique
(l'espace-temps et les champs). Plus spécifiquement, la géométrie non commutative
apparaît comme un cadre mathématique particulièrement adapté à la formulation des
concepts quantiques et des processus de quantification. 

Il est possible de considérer que la géométrie non commutative (ou tout au moins sa
composante topologique) est née lorsqu'a été établi le théorème suivant (premier
théorème de Gelfand--Naimark) : toute $C^*$-algèbre
commutative unifère est isométriquement isomorphe à la $C^*$-algèbre des fonctions
continues sur un compact, à savoir l'espace des caractères sur l'algèbre.  Étant donné que toute l'information topologique
d'un espace est contenue dans l'ensemble des fonctions continues sur cet
espace, on peut constater que la notion de $C^*$-algèbre
unifère permet de généraliser la notion d'espace topologique compact.
La généralisation non commutative nous fait donc changer de point de vue : ce n'est
plus l'ensemble des points (l'espace topologique) qui est fondamental, mais plutôt
l'ensemble des fonctions sur l'ensemble des points. 

Ce théorème de Gelfand--Naimark a été le point de départ de la géométrie non
commutative. À partir de ce résultat fondamental, il a été possible d'étendre la
généralisation au-delà des concepts purement topologiques et de construire une
véritable géométrie riemannienne non commutative avec ses propres versions non
commutatives des notions classiques de calcul différentiel, de calcul intégral, de
fibré vectoriel, de mesure, de variétés riemanniennes à spin, etc. Ce travail
colossal a été réalisé principalement par Alain Connes
\cite{ConnesTorus,NCDG,Book,Cgeom,ConnesReality}.

\medskip

 Les deux aspects de la géométrie non commutative
(non-commutativité et "perte" de la notion de point) ne sont pas sans rappeler la
structure fondamentale de la physique quantique. En effet, en physique quantique, les
observables ne commutent pas forcément et l'évaluation $f(x)$ d'une observable $f$ en un point $x$, n'est
pas définie. En revanche, la notion d'observable existe toujours et $x(f)$ a un sens,
pourvu que $x$ soit un état, c'est à dire l'équivalent non commutatif du caractère (du
point) pouvant évaluer les observables $f$. 

\medskip

 Cette ressemblance frappante entre la
géométrie non commutative et la structure de la physique quantique n'est pas anodine
et constitue même la source principale qui a motivé le développement de la géométrie
non commutative et son application en physique. 
En particulier, la géométrie non commutative
fournit les concepts permettant d'appliquer l'idée fondamentale de la
non-commutativité des observables à l'espace-temps lui-même, donnant par là même une
de ses motivations fondamentales à la physique. 

\medskip

Les deux théories des interactions fondamentales, à savoir la théorie quantique des champs
(modèle standard) pour les interactions électrofaibles et fortes, et la relativité
générale pour l'interaction gravitationnelle, n'utilisent pas le même formalisme (la
première est quantique, la seconde est classique) et ne voient pas l'espace-temps de
la même façon (celui-ci est fixe et minkowskien pour la première, et dynamique pour la
seconde). Ces différences fondamentales ne sont pas gênantes pour l'étude des
phénomènes favorisant l'interaction gravitationnelle devant les autres ou
réciproquement, car ces théories ont chacune un grand succès expérimental dans leur
domaine d'application. Cependant, pour l'étude des phénomènes mettant manifestement en
jeu toutes les interactions (objets compacts, trous noirs, big-bang), il est
nécessaire de rendre compatible ces deux théories, et de les réunir sous un même
formalisme. L'idée généralement poursuivie par les théoriciens consiste à
généraliser le formalisme de la théorie quantique des champs à la gravitation,
autrement dit, réaliser une gravitation quantique. La poursuite de cet objectif s'est
développée à travers plusieurs approches différentes, dont notamment la théorie des
cordes, qui utilise une augmentation du nombre de dimensions, dont certaines sont
compactifiées, et la théorie de la gravité quantique à boucle, qui utilise une
structure en "spin foam" pour l'espace-temps sans utiliser de métrique
spatio-temporelle en "background" comme le fait la théorie des cordes. Aucune de ces
théories n'a reçu de confirmation expérimentale, les prédictions théoriques étant
elles-mêmes difficiles.

\medskip

L'approche suggérée par la géométrie non
commutative consiste à utiliser une généralisation non commutative de la variété
lorentzienne modélisant l'espace-temps. En introduisant la non-commutativité au niveau
même de la structure de l'espace-temps, cette approche permet d'appréhender
l'impossibilité de la continuité de l'espace-temps suggérée par la mécanique quantique
et la limite intrinsèque que constitue la longueur de Planck $l_p=\sqrt{\frac{{\cal
G}\hbar}{c^3}} \approx 10^{-33}$ cm. Cette approche a notamment permis d'unifier, au
niveau classique, les trois interactions du modèle standard avec la gravitation, et
d'interpréter géométriquement le mécanisme de Higgs en physique des particules. 
L'objet central dans l'interface entre la GNC et la physique fondamentale
est celui de triplet spectral, g\'en\'eralisation non commutative de la notion
de vari\'et\'e riemannienne \`a spin, point de d\'epart naturel pour l'\'elaboration de
th\'eories physiques.
En considérant un triplet spectral
dit ``presque commutatif'', c'est-à-dire un produit d'un triplet spectral commutatif
(variété riemannienne compacte, modélisant l'espace-temps ``continu'') avec un triplet
spectral de dimension nulle (une algèbre matricielle), on peut, à l'aide d'une fonctionnelle d'action
particulière sur le produit obtenu, appelée action spectrale de Chamseddine--Connes, retrouver le modèle standard et la relativité générale. Plus précisément,
l'action spectrale $S=\Tr \Phi(\DD/\Lambda)$ permet
d'unifier au niveau classique les interactions \'electro-faible, forte et gravitationnelle \cite{Carminati,CC,CC1,CC2,CCM,Connesaction,CGravity,ConnesMarcolli,Schucker,Jureit1,Jureit2,Stephan}.
Elle est d\'efinie \`a partir du spectre d'un op\'erateur de Dirac
$\DD$ et correspond au nombre des valeurs propres de l'opérateur de Dirac inf\'erieures ou
\'egales \`a une certaine \'echelle de masse $\Lambda$. En procédant à une fluctuation de la m\'etrique,
c'est-\`a-dire, au niveau algébrique, à une transformation de jauge g\'en\'eralis\'ee au groupe des unitaires de l'alg\`ebre du triplet spectral, il est possible d'obtenir le Lagrangien
du mod\`ele standard coupl\'e au Lagrangien gravitationnel d'Einstein--Hilbert, en d\'eveloppant cette fonctionnelle d'action en puissances de $\Lambda$.

\medskip

Dans cette th\`ese, nous nous sommes intéressés à certaines questions mathématiques associées 
au calcul d'action spectrale sur certains espaces non commutatifs tels que le tore non commutatif et la 3-sphère non commutative $SU_q(2)$. Nous nous sommes aussi intéressés à l'existence de tadpoles (termes linéaires associés au potentiel $A$ de la fluctuation métrique dans l'action spectrale) dans le cas de géométries riemanniennes commutatives et à la construction d'un calcul symbolique global générant un produit de Wey--Moyal sur les sections rapidement décroissantes d'un fibré d'une variété avec linéarisation. Dans tous ces travaux, l'outil fondamental a été le calcul pseudodifférentiel, qu'il soit abstrait (sur un triplet spectral quelconque), ou symbolique (sur les variétés). 

\medskip

Cette thèse est divisée en cinq parties. Voici un résumé de chacune de ces parties :

\subsection*{1. Action spectrale sur triplets spectraux}

Ce chapitre, ainsi que le chapitre suivant, a fait l'objet de la publication \emph{Spectral action on noncommutative torus} \cite{MCC}, qui est un travail en collaboration avec Driss Essouabri, Bruno Iochum et Andrzej Sitarz. 

Un triplet spectral est la donnée d'une algèbre involutive $\A$ représentée fidèlement par des opérateurs bornés sur un espace de Hilbert $\H$ et d'un opérateur autoadjoint $\DD$ sur $\H$ à résolvante compacte. On demande d'autre part que les commutateurs de $[\DD,\A]$ soient bornés. Afin de pouvoir construire un calcul pseudodifférentiel et une théorie de champ non commutative, il est utile d'introduire des hypothèses supplémentaires sur le triplet $(\A,\H,\DD)$. On dit notamment que le triplet est de dimension $n$ si les valeurs singulières $(\la_j)_j$ de l'operateur $\DD$ sont du type $\la_j=\O( j^{-1/n})$ et qu'il est régulier si $\A$ et $[\DD,\A]$ sont dans $\cap_{k} \Dom \delta^k$, où $\delta(T):=[|\DD|,T]$ (c'est à dire, qu'il est possible de "dériver" tout élément de l'algèbre).
Afin d'avoir une théorie contenant un opérateur de conjugaison de charge, il est nécessaire d'introduire une notion de structure réelle sur le triplet spectral: il s'agit d'un opérateur anti-unitaire $J$ qui commute ou anticommute avec $\DD$, selon la dimension du triplet: $\DD J = \eps J\DD$, avec $\eps\in\set{0,1}$. Dans cet environnement, les bosons de jauge sont vus comme des fluctuations internes de l'opérateur de Dirac (c'est-à-dire, au niveau classique, de la métrique): $\DD\to \DD_A:=\DD+A+\eps J A J^{-1}$, où ici $A$ est une 1-forme autoadjointe, c'est à dire une combinaison linéaire d'opérateurs du type $a[\DD,b]$, où $a$ et $b$ sont des éléments de l'algèbre $\A$.

\'Etant donné un triplet régulier $(A,\H,\DD)$ avec structure réelle $J$, un point fondamental pour faire le lien avec la physique, est d'obtenir une fonctionnelle d'action. Le principe de l'action spectrale de A. Chamseddine et A. Connes dit que la fonctionnelle suivante
$$
\SS(\DD_{A},\Phi,\Lambda):=\Tr \big( \Phi( \DD_{A} /\Lambda) \big)
$$
où $\Phi$ est une fonction de cut-off et $\Lambda$ un paramètre d'échelle de masse, est la fonctionnelle fondamentale qui peut être utilisée à la fois au niveau classique pour comparer différents espaces géométriques et au niveau quantique dans la formulation par intégrale fonctionnelle, après rotation de Wick à partir de la signature euclidienne.

Afin de pouvoir calculer précisément cette fonctionnelle en fonction de $A$ et en retirer le maximum d'information, il apparait fondamental de pouvoir développer un calcul pseudodifférentiel abstrait sur le triplet $(\A,\H,\DD)$.

En posant $ D:= \DD
+P_0$, où $P_0$ est la projection orthogonale sur $\Ker \DD$, et
\begin{align*}
OP^0&:=\{T \, : \, t\mapsto F_t(T) \in
C^\infty\big(\R,\B(\H)\big)\},\\
OP^\a&:= \set{T \,: \, T |D|^{-\a} \in  OP^0},
\end{align*}
où $F_t(T):=e^{it|\DD|}\,T\,e^{-it|\DD|}$, on peut introduire la définition suivante des opérateurs pseudodifférentiels, qui forment une algèbre $\Z$-graduée, en tenant compte à la fois de la valeur absolue de l'opérateur de Dirac, et de la structure réelle $J$ :

\begin{deffr}
Soit $\DD(\A)$ l'algèbre générée par $\A$, $J\A J^{-1}$, $\DD$ et $|\DD|$.
Un operateur pseudodifférentiel est un opérateur $T$ tel qu'il existe $d\in \Z$ tel que
pour tout $N\in \N$, il existe $p\in \N_0$, $P\in \DD(\A)$ et $R\in
OP^{-N}$ tels que $P\,D^{-2p}\in OP^d$ et
$$
T=P\,D^{-2p}+R\, .
$$
\end{deffr}

Il se trouve que si le triplet spectral est \emph{simple}, c'est dire si les fonctions $s\mapsto \zeta_{D}^P(s):=\Tr(P |D|^{-s})$, où $P$ est un opérateur pseudodifférentiel d'ordre zéro (la structure réelle $J$ étant prise en compte), sont méromorphes sur $\C$ avec uniquement des pôles simples, alors la fonctionnelle suivante (appelée intégrale non commutative)
$$
\ncint  P := \Res_{s=0} \Tr P |D|^{-s}
$$
est une trace sur l'algèbre des opérateurs pseudodifférentiels. Etant donné qu'un développement du type "noyau de la chaleur" (voir par exemple \cite[Theorem 1.145]{ConnesMarcolli}) implique
\begin{align*}
    \SS(\DD_{A},\Phi,\Lambda) \, = \,\sum_{0<k\in Sd^+} \Phi_{k}\,
    \Lambda^{k} \ncint \vert D_{A}\vert^{-k} + \Phi(0) \,
    \zeta_{D_{A}}(0) +\mathcal{O}(\Lambda^{-1}),
\end{align*}
où $\Phi_{k}= \half\int_{0}^{\infty} \Phi(t) \, t^{k/2-1} \, dt$ et
$Sd^+$ est la partie strictement positive du spectre de dimension de $(\A,\H,\DD)$ (ensemble des pôles des fonctions $\zeta_D^P$), la principale difficulté est le calcul des termes $\ncint \vert D_{A}\vert^{-k}$, $\zeta_{D_{A}}(0)$.
En utilisant le calcul pseudodifférentiel précédent et le fait que l'intégrale non commutative est nulle sur l'espace des opérateurs dans $\L^1(\H)$, on peut alors établir les résultats suivants, en notant $\wt A:=A +\eps J A J^{-1}$, $X:=\set{\wt A,\DD}+\wt A^2$, $\nabla(T):=[\DD^2,T]$, $\eps(T):=\nabla(T)D^{-2}$, $g(s,r):=\tbinom{s/2}{r}$:
\begin{align*}
&\zeta_{D_{A}}(0)-\zeta_{D}(0)=\sum_{q=1}^{n} \tfrac{(-1)^{q}}{q}
\ncint (\wt AD^{-1})^{q} \, , \\
&\ncint |D_A|^{-(n-k)}= \ncint |D|^{-(n-k)} +
\sum_{p=1}^k \sum_{r_1,\cdots, r_p =0}^{k-p}
\underset{s=n-k}{\Res} \, h(s,r,p) \, \Tr\big(\eps^{r_1}(Y)
\cdots\eps^{r_p}(Y) |D|^{-s}\big)\, , \\
&h(s,r,p):=(-s/2)^p\int_{0\leq t_1\leq \cdots \leq t_p\leq 1}
g(-st_1,r_1)\cdots
g(-st_p,r_p) \, dt\, ,\\
&Y \sim \sum_{p=1}^N\sum_{k_1,\cdots,k_p
=0}^{N-p}\tfrac{(-1)^{|k|_1+p+1}}{|k|_1+p}
\nabla^{k_p}(X\nabla^{k_{p-1}}(\cdots
X\nabla^{k_1}(X)\cdots)) D^{-2(|k|_1+p)} \mod OP^{-N-1}\, .
\end{align*}

\subsection*{2. Action spectrale sur le tore non commutatif}
Nous avons appliqué ces résultats au tore non commutatif. Il s'agit du triplet spectral non commutatif le plus simple possible. Il est basé sur l'algèbre $\Coo(\T^n_\Th)$ représentée par des fonctions rapidements décroissantes du type $\sum_{l\in \Z^n} a_l U_l$ où $(a_l)\in \S(\Z^n)$ et les éléments $U_l$ sont des unitaires vérifiant la loi de commutation 
$$
U_l U_k = e^{-i\, l\cdot\Th k} U_{k}U_l
$$
où $\Th$ est une matrice antisymétrique de déformation.
L'opérateur de Dirac est de la forme $\DD=-i\ga^\mu \delta_\mu$, où les $\ga^\mu$ sont les matrices gamma usuelles agissant sur $\C^{2^{[n/2]}}$ et $\delta_\mu(U_k):=ik_\mu U_k$. La fonctionnelle $\tau(a):=a_0$ où $a:=\sum_l a_l U_l$ génère un espace de Hilbert $\H$ de type GNS à partir duquel un triplet spectral réel régulier de dimension $n$ peut être construit. 
Le calcul des intégrales non commutatives précédentes fait intervenir des termes du type $JAJ^{-1}$. Il en résulte que nous sommes alors amenés à étudier des résidus de séries de fonctions zêta pondérées par des suites rapidement décroissantes et faisant intervenir une phase dépendante de la pondération. Plus précisément, nous avons à étudier du point de vue de l'analyse complexe, les fonctions du type
$$
g(s):= {\sum}_{l\in (\Z^n)^{q}}\,
b(l)\,f_{\Th\, \sum_{i=1}^q \eps_i l_i}(s)
$$
où $\eps_i\in \set{-1,1}$, $b\in \S((\Z^{n})^q)$, $f_a(s):={\sum}'_{k\in \Z^n}
\frac{P(k)}{|k|^s}\,
e^{2\pi i k.a}$, $a\in \R^n$ et $P$ un polynôme homogène de degré $d$.
Il apparait alors que nous pouvons connaitre précisément les pôles de ces fonctions et calculer précisément les résidus correspondants si nous faisons une hypothèse reliée à la théorie de l'approximation diophantienne sur la matrice de déformation $\Th$. Plus exactement, on établit que si $\tfrac{1}{2\pi}\Th$ est une matrice diophantienne, la fonction précédente $g$ est méromorphe sur $\C$, avec au plus un pôle simple en $s=d+n$. Le résidu en ce pôle est (Theorem \ref{analytic})
$$
\underset{s=d+n}{\Res} \, g(s) = \big(\sum_{l\in {\cal Z}} \, b(l) \big)\, 
\int_{u\in
  S^{n-1}} P(u)\, dS(u) = \sum_{l\in (\Z^n)^q} b(l) \Res_{s=d+n} f_{\Th\,\sum_{i=1}^q \eps_i l_i}(s)
$$  
où ${\cal Z}:=\{l\in(\Z^n)^{q} \ : \ \sum_{i=1}^q \eps_i
l_i= 0\}$.
Finalement, nous obtenons grâce a ces résultats, le théorème suivant:

\begin{theofr} Si $\tfrac{1}{2\pi}\Th$ est une matrice réelle antisymétrique diophantienne, alors le tore non commutatif (avec structure réelle) est un triplet spectral simple et son action spectrale est: 

\noindent $(i)$ pour $n=2$,
$$
\SS(\DD_{A},\Phi,\Lambda)=4\pi\,\Phi_{2} \, \Lambda^{2} +
\mathcal{O}(\Lambda^{-2}),
$$
\noindent $(ii)$ pour $n=4$,
$$
\SS(\DD_{A},\Phi,\Lambda)= 8\pi^2\,\Phi_{4} \, \Lambda^{4}
-\tfrac{4\pi^{2}}{3}\
\Phi(0) \,\tau(F_{\mu\nu}F^{\mu\nu})+  \mathcal{O}(\Lambda^{-2}),
$$
$(iii)$ de façon générale:
$$
\SS(\DD_{A},\Phi,\Lambda) \, = \,\sum_{k=0}^n \Phi_{n-k}\,
c_{n-k}(A) \,\Lambda^{n-k} +\mathcal{O}(\Lambda^{-1}),
$$
où $c_{n-2}(A)=0$, $c_{n-k}(A)=0$ pour $k$ impair. En particulier, $c_0(A)=0$
quand $n$ est impair.
\end{theofr}
Lorsque $n=4$, un terme de type Yang-Mills non commutatif apparait au niveau du terme invariant d'échelle (en $\Lambda^0$): $\tau(F_{\mu\nu}F^{\mu\nu})$, où $F_{\mu\nu}:=\delta_\mu(A_\nu)-\delta_\nu(A_\mu)-[A_\mu,A_\nu]$. 
La forme de l'action spectrale du tore non commutatif est donc très fortement similaire à celle du tore commutatif, à condition que la déformation $\Th$ vérifie une condition diophantienne. Nous ne savons pas cependant si cette condition, bien que suffisante, est effectivement nécessaire pour obtenir une telle action. Nous conjecturons que c'est bien le cas, comme le suggère un argument heuristique (Remark \ref{remdiophantine}).

\subsection*{3. Action spectrale sur $\SUq$}

Le troisième chapitre correspond à un travail en collaboration avec Bruno Iochum et Andrzej Sitarz: \emph{Spectral action on $\SUq$} \cite{MC}.

L'objectif consiste à appliquer les techniques pseudodifférentielles vues précédemment au calcul de l'action spectrale sur le triplet spectral de D\c{a}browski et al. \cite{DLSSV} basé sur le groupe quantique $SU_q(2)$. Ce groupe quantique (ou algèbre de Hopf) peut être vu comme une $q$-déformation de la 3-sphère commutative.

L'algèbre $\A := \A(SU_q(2))$ de ce triplet est définie comme étant l'algèbre polynomialement engendrée par deux éléments $a$ et $b$ qui sont assujettis aux règles de commutation suivantes, où $0<q<1$:
\begin{eqnarray*}
&ba = q \,ab,  \qquad  b^*a = q\,ab^*, \qquad bb^* = b^*b,
&a^*a + q^2 \,b^*b = 1,  \qquad  aa^* + bb^* = 1\, .
\end{eqnarray*}
On définit un espace de Hilbert $\H=\H^{\up} \oplus \H^{\dn}$ avec les bases orthonormales $v^{j\up}_{m,l}$ et
$v^{j\dn}_{m,l}$ où $j\in \half \N$, $0\leq m \leq 2j$, $0\leq l\leq 2j+1$
et $v^{\dn,j}_{m,l}$ est nul si $j=0$ ou $l = 2j$ or $2j+1$.

On représente alors $\A$ avec l'application $\pi$ initialement définie sur $a, b$ et qui est donnée à la section 2.2.
Cette représentation faisant intervenir des coefficients assez compliqués $\alpha^\pm_{j\mu n}$, $\beta^\pm_{j\mu n}$, et non diagonaux, il apparait très utile de définir une représentation approximée $\ul\pi$ telle que
$$
\piappr(a) := {a}_+ + {a}_-, \quad \piappr(b) := {b}_+ + {b}_-
$$
où on a posé (ici $q_n:=\sqrt{1-q^{2n}}$):
\begin{align*}
a_{+}\, v^j_{m,l} &= q_{m+1} \,q_{l+1}\, v^{j^+}_{m+1,l+1}\,,\quad
a_-\,v^j_{m,l} = q^{m+l+1}\, v^{j^-}_{m,l} \,,\nonumber\\
b_+\, v^j_{m,l} &= q^{l}\,q_{m+1}\, v^{j^+}_{m+1,l} \,,\qquad \quad
b_-\, v^j_{m,l} = -q^{m}\,q_{l}\,  v^{j^-}_{m,l-1}\,\, .
\end{align*}
Cette approximation ne modifiera pas les calculs d'intégrale non commutative puisque pour tout $x\in \A$, $\pi(x)-\piappr(x) \in \K_{q}$ où
$\K_{q}$ est un idéal inclus dans les opérateurs de type $OP^{-\infty}$. Autrement dit, $\pi(x)-\piappr(x)$ est un opérateur pseudodifférentiel régularisant.

L'opérateur de Dirac est défini de la façon suivante: $\DD\,v^j_{ml}=\genfrac{(}{)}{0pt}{1}{2j + \sesq \quad\,\,\,0}{\,0
\quad -2j- \half} \, v^j_{ml}$. 
Il possède donc le même spectre que l'opérateur de Dirac associé à la structure spinorielle de la 3-sphère standard. Cependant, l'opérateur de Dirac sur $SU_q(2)$ possède une caractéristique très particulière qui n'existe pas sur la 3-sphère commutative: le signe de $\DD$, noté $F:=\DD|\DD|^{-1}$, commute modulo $OP^{-\infty}$ avec les éléments de l'algèbre $\A$. Ceci a pour conséquence que les 1-formes $a[\DD,b]$ sont essentiellement équivalentes aux $\delta$-1-formes $a[|\DD|,b]$ dans les calculs d'intégrales non commutatives. 
On peut construire avec cet opérateur de Dirac un triplet spectral $(\A,\H,\DD)$ régulier et de spectre de dimension $\set{1,2,3}$ sur $SU_q(2)$. D'autre part, une structure réelle $J$ peut être construite sur ce triplet, avec la relation $J\DD= \DD J$.
Cette structure réelle est définie par 
\begin{align*}
&J  \, v^{j\up}_{m,l}= i^{2(m+l)-1}v^{j\up}_{2j-m,2j+1-l} \,,\qquad
J  \, v^{j\dn}_{m,l}= i^{-2(m+l)+1}v^{j\dn}_{2j-m,2j-1-l}\,.
\end{align*}
En utilisant une décomposition de type Poincaré--Birkhoff--Witt sur les $\delta$-1-formes pour calculer les intégrales $\ncint{|D_A|^{-k}}$, nous avons été amenés à étudier certains produits de séries faisant intervenir une inversion d'indice $m\mapsto 2j-m$ (sections 2.4.4 et 2.8.C, 2.8.D). Ces résultats ont permis de montrer que la structure réelle ne modifie pas le spectre de dimension $\set{1,2,3}$ de $SU_q(2)$, et que l'action spectrale du triplet $(\A,\H,\DD)$ qui tient compte de la structure réelle, c'est à dire associée à la perturbation $\DD\to \DD+ A +JAJ^{-1}$, est totalement déterminée par les intégrales suivantes (qui ne font plus intervenir la structure réelle $J$)
$$
\ncint A_\delta^{q} |\DD|^{-p}, \qquad 1\leq q\leq p\leq 3\, ,
$$
où $A_\delta$ est une $\delta$-1-forme.
Afin de calculer précisément ces intégrales, un calcul différentiel sur $SU_q(2)$ modulo un idéal $\CR$ a été défini. Cet idéal est conçu de telle sorte que tout opérateur $T$ dans $\CR$ est invisible par intégration non commutative avec $|\DD|^{-2}, |\DD|^{-3}$: $\ncint T |\DD|^{-2}=\ncint T|\DD|^{-3}=0$.
Le calcul d'intégrale du type $\ncint A_\delta^q |\DD|^{-p}$ est alors réduit à certains types particuliers de $\delta-1$-formes $A_\delta$. Par exemple, il est possible d'obtenir toutes ces intégrales avec $p=1$ à partir des intégrales suivantes (Proposition \ref{calculavecLM}):
\begin{align*}
& \ncint (bb^*)^n \,|\DD|^{-1} =
\tfrac{-2(1+q^{2n})}{(1-q^{2n})^2}\,, \\
& \ncint (bb^*)^{n} b^* \,\delta b\, |\DD|^{-1} = 
\ncint (bb^*)^{n} b \,\delta b^*\, |\DD|^{-1} =
\tfrac{2}{1-q^{2n+2}}\,, \\
& \ncint (bb^*)^n a \, \delta a^* \,|\DD|^{-1}  =
\tfrac{-2q^{4n+2}-2q^{4n}-2q^{2n+2}+6q^{2n}}{(1-q^{2n})^2(1-q^{2n+2})}\,,\\
& \ncint (bb^*)^n a^* \, \delta a \,|\DD|^{-1}  =
\tfrac{6q^{2n+2}-2q^{2n}-2q^2-2}{(1-q^{2n})^2(1-q^{2n+2})}\,.
\end{align*}
Ces résultats permettent finalement de retrouver toutes les actions spectrales possibles sur $\SUq$. Nous avons constaté, à l'aide de certains exemples, que les termes de l'action spectrale n'ont pas toujours une limite finie lorsque $q\to 1$, c'est à dire lorsque $SU_q(2)$ "s'approche" de la 3-sphère commutative $\mathbb{S}^3$. D'autre part, même lorsque cette limite existe, le terme obtenu n'est pas égal au terme correspondant à la 3-sphère.
Il existe donc un "mur" entre la $q$-géométrie de $SU_q(2)$ et $\mathbb{S}^3$ qui n'apparait pas au niveau des déformations du tore ou des plans de Moyal. 
Le calcul d'action spectrale sur d'autres géométries, telles que les sphères de Podle\'s
\cite{D'AndreaDabrowski, D'ADLV}, le plan projectif \cite{DDL}, ou les $q$-sphères euclidiennes
\cite{Landi, Dabrowski,D'ADL} pourrait faire l'objet d'investigations futures.

\subsection*{4. Tadpoles et triplets spectraux commutatifs}

Ce chapitre présente un travail fait en collaboration avec Bruno Iochum: \emph{Tadpoles and commutatives spectral triples} \cite{Tadpole}. Nous nous sommes intéressés à certaines questions concernant l'annulation d'intégrales non commutatives apparaissant dans l'action spectrale de Chamseddine--Connes. Plus particulièrement, nous avons étudié les intégrales du type $\ncint A\DD^{-1}$ (linéaires en $A$, où $A$ est une 1-forme) qui correspondent, en théorie des champs, à des tadpoles. Ici, $\DD^{-1}$ est le propagateur fermionique de Feynman et $A\DD^{-1}$ est un graphe à une boucle avec une ligne fermionique interne et une ligne bosonique externe :
\begin{fmffile}{tadpolegraph}
\[
{\begin{picture}(100,60)
\put(0,0){\begin{fmfgraph}(50,60)
\fmfleft{l}
\fmfright{r}
\fmffreeze
\fmf{photon,tension=3}{r,w}
\fmf{photon,tension=3}{w,v}
\fmf{fermion,right}{v,l}
\fmf{fermion,right}{l,v}
\end{fmfgraph}}
\put(-10,45){\mbox{${\cal D}^{-1}$}}
\put(55,27){\mbox{$A$}}
\end{picture}} 
\]
\end{fmffile}

\vspace{-1cm}

Plus généralement, on définit un tadpole comme étant un terme linéaire en $A$ apparaissant dans l'action spectrale $\S(\DD+A,\Phi,\Lambda)$. Si $d$ est la dimension d'un triplet spectral $(\A,\H,\DD)$, on peut montrer que la partie linéaire en $A$ du terme en $\Lambda^{d-k}$ de l'action spectrale, que l'on note $\Tad_{\DD+A}(d-k)$ (tadpole d'ordre $k$) vérifie:
\begin{align*}
&\Tad_{\DD+A}(d-k) =-(d-k)\ncint A \DD |\DD|^{-(d-k) -2}, \quad \forall  k\neq d,\\
    \label{tadpole0}
& \Tad_{\DD+ A}(0)=-\ncint  A \DD^{-1}.
\end{align*}
Il apparait alors que pour tout triplet spectral riemannien, c'est à dire du type $\big( \A:=C^{\infty}(M),\, \H:=L^2(M,S),\,\DD \big)$ où $M$ est une variété compacte sans bord riemannienne à spin de dimension $d$, $\H$ l'espace de Hilbert des spineurs de carré intégrable et $\DD$ l'opérateur de Dirac associé à la structure spin, tous ces termes sont nuls. Cette propriété est basée sur le fait que ce triplet est réel et commutatif. La structure réelle de $\big( \A:=C^{\infty}(M),\, \H:=L^2(M,S),\,\DD \big)$ provient de l'existence de la structure spin, car celle-ci implique l'existence d'un opérateur de conjugaison de charge $J$, qui est une isométrie antilinéaire satisfaisant:
$$
JaJ^{-1}=a^*,\quad \forall a \in \A \, .
$$
De façon plus générale, on peut montrer, en utilisant le résidu de Wodzicki, que $\ncint B|\DD|^{-(2k+1)}$, où $B$ est un polynôme généré par $\A$ et $\DD$, est toujours nul si la dimension est paire, alors que $\ncint B \DD^{-2k}$ est toujours nul en dimension impaire. Dit autrement, $\ncint B |D|^{-(d-q)} =0$ pour tout $q$ impair.

Nous nous sommes aussi intéressés au cas d'une variété de dimension paire compacte avec bord, et d'une condition au bord de type chirale, c'est-à-dire telle que l'opérateur de Dirac perturbé $\DD + A$ agit sur le domaine 
$\set{s\in C^\infty(V) \, : \, \Pi_- s_{|\del M} =0 }$ où $\Pi_\pm:=\half(1\pm \chi)$. Ici $\chi$ est un opérateur de chiralité sur le bord, c'est-à-dire tel que $\chi^2=1$ et  
\begin{align*}
\{\chi,\ga^d\} = 0 \,, \qquad  [\chi, \ga^a] = 0 ,\qquad \forall a \in \set{1,\cdots,d-1}\,.
\end{align*}
On obtient alors une condition au bord mixte naturelle sur l'opérateur de type Laplace $(\DD+A)^2$:
$$
\B_\chi^A s:= \Pi_- (D+A)^2 s_{|\del M} \oplus \Pi_{-}s_{|\del M}\, \, .
$$  
En se basant sur les formes explicites des coefficients du noyau de la chaleur dans le cadre des conditions aux bords mixtes \cite{BG1,BG2}, on peut alors montrer qu'aucun tadpole ne peut exister dans l'action $\Tr(\Phi((D+A)^2_{B^A_\chi}/\Lambda^2))$, au moins jusqu'à l'ordre $5$. Nous nous attendons à ce que ceci se généralise à tous les ordres, et à d'autres types de conditions aux bords, notamment celles d'Atiyah-Patodi-Singer. Une approche de cette question, associée aux triplets spectraux commutatifs, fait l'objet d'un travail en cours \cite{avenir}.

\subsection*{5. Calcul pseudodifférentiel global sur variétés avec linéarisation}

Le cinquième et dernier chapitre présente la construction d'un calcul symbolique et pseudodifférentiel global sur les variétés avec linéarisation.

Il a été montré par Gayral et al. \cite{Gayral2} que les plans de Moyal sont des triplets spectraux non compacts. En d'autres termes, les plans de Moyal peuvent être vus comme des variétés à spin non compactes et non commutatives. 
Ce lien entre la théorie de la quantification par déformation et la géométrie non commutative montre en particulier que le paradigme des triplets spectraux est adapté aux questions de quantification. Le produit
de Moyal est défini sur l'espace de Schwartz $\SS(\R^{2n})$ des fonctions rapidement décroissantes
\begin{equation*}
f \star g(x) := (\pi\th)^{-2n} \int_{\R^{4n}} \,dy \,dz\,f(y)\, g(z)\,
e^{\frac{2i}{\th}(x-y)\,\.\,S(x-z)}
\end{equation*}
où $\th\in \R^*$ et $S =\genfrac{(}{)}{0pt}{1}{0
\  -1_n}{1_n\  0}$, et donne à $\S(\R^{2n})$ une structure de pré-$C^*$-algèbre de Fréchet. Le triplet spectral décrit dans \cite{Gayral2} est basé sur cette algèbre. 
L'extension de cette construction remarquable à un fibré cotangent $T^*M$ d'une variété $M$ non isométrique à $\R^n$ reste un problème ouvert. Nous proposons dans ce chapitre la construction d'un calcul pseudodifférentiel global afin d'obtenir un produit de Moyal plus général que celui défini sur $\SS(\R^{2n})$.

Le calcul pseudodifférentiel global \cite{Bokobza,Drager,Widom1,Widom2} permet d'établir une notion globale de symbole total d'un opérateur pseudodifférentiel, modulo l'algèbre résiduelle des opérateurs régularisant $\Psi^{-\infty}$ (à noyau lisse).  Il est basé sur la définition d'une connexion sur la variété (ou plus généralement, d'une linéarisation \cite{Bokobza}), et utilise l'application exponentielle, ainsi que le transport parallèle sur les géodésiques associées, pour obtenir un isomorphisme global (modulo $\Psi^{-\infty}$) entre les algèbres symboliques et opératorielles. 

Lorsque la variété $M$ n'est pas compacte, il apparaît utile, afin d'avoir une continuité de type $L^2(M)$ pour les opérateurs d'ordre 0, de considérer des espaces de symboles qui contrôlent à la fois la variable $x$ et la covariable $\th$. Ces contrôles ont été utilisés sur $\R^n$ dans le cadre du calcul pseudodifférentiel de type $SG$ (voir par exemple \cite{Schrohe}). Nous avons été amenés à définir un tel calcul dans le cadre des variétés à linéarisation, c'est à dire telles qu'il existe une application exponentielle abstraite (ou linéarisation) $\exp: TM \to M$ établissant un difféomorphisme $\exp_x:T_x M\to M$ en chaque point $x\in M$. Ce cadre est suffisamment général pour contenir les variétés de Cartan--Hadamard, qui sont les variétés simplement connexes, complètes, et de courbure négative.

L'outil essentiel de la définition du calcul global dans le cadre des variétés à linéarisation est la combinatoire liée à la formule de Faa-di-Bruno à plusieurs variables \cite{Constantine}. Celle-ci s'exprime de la façon suivante:
si $f\in C^\infty(\R^p)$ et $g\in C^\infty(\R^n,\R^p)$ alors pour tout $n$-multi-indice $\nu\neq 0$,
$$
\del^\nu (f\circ g) = \sum_{1\leq |\la|\leq |\nu|} (\del^\la f)\circ g\  \sum_{s=1}^{|\nu|} \sum_{p_s(\nu,\la)}\nu! \prod_{j=1}^s \tfrac{1}{k^j!(l^j!)^{|k^j|}} (\del^{l^j} g)^{k^j}
$$
où les multi-indices $k^j$, $l^j$ appartenant à l'ensemble $p_s(\nu,\la)$ vérifient  $\sum_{j=1}^s k^j = \la$ et $\sum_{j=1}^s |k^j| l^j= \nu$.
A l'aide de cette formule, il est possible de définir, de façon intrinsèque et dans le cadre des variétés à linéarisation, des espaces de Fréchet nucléaires de fonctions rapidement décroissantes $\S(M)$, $\S(M\times M)$, $\S(TM)$, $\S(T^*M)$ pourvu que l'application exponentielle satisfasse une hypothèse de contrôle à l'infini de type polynomial. Cette hypothèse dit plus précisément que les applications de changement de coordonnées normales $\psi_{z,z'}^{\bfr,\bfr'}:=L_\bfr \exp_z^{-1} \exp_{z'} L_{\bfr'}^{-1}$ (où $z,z'\in M$, $\bfr,\bfr'$ bases de $T_z M$, $T_{z'}M$ et $L_{\bfr}$ isomorphisme linéaire associé à $\bfr$) sont dans l'espace $\O_M(\R^n,\R^n)$ des fonctions (avec leurs dérivées) polynomialement dominées à l'infini.

Il est d'autre part possible de définir, sous ces conditions, des isomorphismes topologiques de quantifications $\Op_\la$ paramétrés par $\la\in [0,1]$ qui permettent de passer de l'espace des opérateurs (plus exactement des noyaux) $\S'(M\times M)$ à l'espace des "symboles" $\S'(T^*M)$. Ces isomorphismes envoient $S(M\times M)$ dans $\S(T^*M)$ et il apparait alors possible de définir un $\la$-produit (ou produit de Moyal si $\la=\half$) sur $\S(T^*M)$ simplement par transfert de la convolution de noyau dans $\S(M\times M)$. Plus précisément, le produit 
$$ 
a\circ_\la b\,(x,\eta) = \int_{T_x(M) \times M} d\mu_{x}(\xi)d\mu(y)\int_{V^\la_{x,\xi,y}} d\mu_{x,\xi,y}^*(\th,\th')\, g^\la_{x,\xi,y}\,e^{2\pi i \om^\la_{x,\xi,y}(\eta,\th,\th')} a(y^\la_{x,\xi},\th)\,b(y^{1-\la}_{x,-\xi},\th')
$$
où les applications $V$, $g$, $\om$ sont déterminées (Proposition \ref{la-product}) par l'application exponentielle et la densité $d\mu$ considérée sur la variété, donne à $\S(T^*M)$ une structure d'algèbre de Fréchet, et se réduit précisément au produit de Moyal classique lorsque $\la=\half$ et $M=\R^n$.

Nous avons ensuite étudié l'extension du $SG$-calcul sur les variétés à linéarisation. Les espaces de symboles $S^{l,m}$ (ayant un controle séparé en $x$ et $\th$) peuvent être définis si on renforce l'hypothèse précédente de controle polynomial sur les difféomorphismes $\psi_{z,z'}^{\bfr,\bfr'}$. Plus précisément, si les applications $\psi_{z,z'}^{\bfr,\bfr'}$ vérifient $(S_1)$, c'est-à-dire si pour tout $n$-multi-indice $\a \neq 0$ (ici $\langle \rx\rangle := (1+\norm{\rx}^2)^{1/2}$)
\begin{equation*}
 \del^\a \psi_{z,z'}^{\bfr,\bfr'} (\rx)  = \O \big( \langle \rx \rangle^{1-|\a|}\big)\, ,
\end{equation*}
alors les espaces $S^{l,m}$ de symboles $a$, définis par les estimations suivantes, pour tout système de coordonées normal $(z,\bfr)$,
\begin{equation*}
\del^{(\a,\b)}_{z,\bfr} a (x,\th)=\O\big(\langle x\rangle^{l- |\a|}_{z,\bfr}\,\langle \th\rangle^{m-|\b|}_{z,\bfr,x} \big) \, ,
\end{equation*}
sont des espaces de Fréchet homéomorphes aux espaces de $SG$-symboles classiques sur $\R^n$.
A partir de ces espaces, on peut définir les opérateurs pseudodifférentiels sur $M$ en posant $\Psi^{l,m}:=\Op_\la(S^{l,m})$ pourvu que cet espace ne dépende pas du paramètre de quantification $\la$ et qu'il stabilise à la fois $\S(M)$ et $\S'(M)$. Ceci a été rendu possible par une analyse des espaces d'amplitudes $a$ et des opérateurs associés
$$
\langle \Op_{\Ga}(a) , u\rangle :=  \int_{\R^{3n}} e^{2\pi i \langle \vth,\zeta \rangle } \Tr\big(a(\rx,\zeta,\vth)\, \Ga(u)^*(\rx,\zeta)) \, d\zeta\,d\vth\,  d\rx
$$
où $\Ga$ est un isomorphisme topologique de $\S(\R^{2n})$ vérifiant certaines hypothèses de contrôle à l'infini (Proposition \ref{ampliOP}, \ref{amplContinu} et Lemma \ref{noyauReste}).

La partie suivante est consacrée à un résultat sur la composition des opérateurs pseudodifférentiels. Il est établi (Theorem \ref{compo}) sous une hypothèse particulière $(C_\sg)$ (Definition \ref{Csigma}) sur l'application exponentielle, que $\Psi :=\cup_{l,m} \Psi^{l,m}$ est une $*$-algèbre sous la composition d'opérateurs et le symbole (pour $\la=0$) du produit de deux opérateurs $A,B$ satisfait la relation asymptotique 
$$
\sigma_{0}(AB)_{z,\bfr} \sim \sum_{\b,\ga \in \N^n} c_\b c_\ga \del_{\zeta,\vth}^{\ga,\ga}\big( a(\rx,\vth)\del^{\b}_{\zeta'}\big(e^{2\pi i \langle \vth,\varphi_{\rx,\zeta}(\zeta')\rangle} (\del^{\b}_{\vth'} f_b)(\rx,\zeta,\zeta',L_{\rx,\zeta}(\vth))\big)_{\zeta'=0}\big)_{\zeta=0}
$$
où  $a:=\sigma_0(A)_{z,\bfr}$, $b:=\sigma_0(B)_{z,\bfr}$, et les termes $L$, $\phi$ capturent la "courbure" liée à l'application exponentielle abstraite $\exp$.
Nous montrons en dernière partie que l'espace hyperbolique $\HH$ de dimension 2 est une variété avec linéarisation (l'application exponentielle étant celle venant de la structure riemanienne) vérifiant l'hypothèse de contrôle $(S1)$. Ceci permet de définir de façon globale et intrinsèque les espaces de Schwartz $\S(\HH)$, $\S(T^*\HH)$, $\S(T\HH)$ ainsi que les espaces de symboles $S_1^{l,m}(T^*\HH)$.

L'analyse détaillée des $\la$-produits sur $\S(T^*\HH)$ (ou plus généralement sur $\S(T^*M)$, pour $M$ avec $\O_M$-linéarisation) et les propriétés spectrales associées restent à étudier. Il serait par exemple intéressant de voir sous quelle condition les algèbres $(\S(T^*M),\circ_\la)$ peuvent générer un triplet spectral non compact. D'autre part, il est possible d'envisager de connecter ou d'étendre le calcul symbolique présenté ici avec les opérateurs de Fourier intégraux \cite{Coriasco,Ruzhansky3,Ruzhansky2}, les traces régularisées \cite{Paycha} et les espaces de Gelfand--Shilov \cite{Cappiello}.

\end{document}